\newcommand{\ucalR}{\underline{\calR}}
\newcommand{\ocalR}{\overline{\calR}}
\newcommand{\grad}{\nabla}
\newcommand{\Unif}{\mathrm{Unif}}
\newcommand{\sign}{\mathrm{sign}}
\def\UrlSpecials{\do\~{\kern -.15em\lower .7ex\hbox{~}\kern .04em}} \catcode`~=13 
\newcommand{\nn}{\nonumber}
\newcommand{\calA}{\mathcal{A}}
\newcommand{\calB}{\mathcal{B}}
\newcommand{\calC}{\mathcal{C}}
\newcommand{\calF}{\mathcal{F}}
\newcommand{\calI}{\mathcal{I}}
\newcommand{\calK}{\mathcal{K}}
\newcommand{\calM}{\mathcal{M}}
\newcommand{\calN}{\mathcal{N}}
\newcommand{\calP}{\mathcal{P}}
\newcommand{\calR}{\mathcal{R}}
\newcommand{\calS}{\mathcal{S}}
\newcommand{\calT}{\mathcal{T}}
\newcommand{\calU}{\mathcal{U}}
\newcommand{\calV}{\mathcal{V}}
\newcommand{\calW}{\mathcal{W}}
\newcommand{\calX}{\mathcal{X}}
\newcommand{\calY}{\mathcal{Y}}
\newcommand{\calZ}{\mathcal{Z}}
\newcommand{\ba}{\mathbf{a}}
\newcommand{\bA}{\mathbf{A}}
\newcommand{\bB}{\mathbf{B}}
\newcommand{\bC}{\mathbf{C}}
\newcommand{\bD}{\mathbf{D}}
\newcommand{\bg}{\mathbf{g}}
\newcommand{\bI}{\mathbf{I}}
\newcommand{\bK}{\mathbf{K}}
\newcommand{\bL}{\mathbf{L}}
\newcommand{\bM}{\mathbf{M}}
\newcommand{\bS}{\mathbf{S}}
\newcommand{\bu}{\mathbf{u}}
\newcommand{\bU}{\mathbf{U}}
\newcommand{\bv}{\mathbf{v}}
\newcommand{\bV}{\mathbf{V}}
\newcommand{\bW}{\mathbf{W}}
\newcommand{\bx}{\mathbf{x}}
\newcommand{\bz}{\mathbf{z}}
\newcommand{\rmc}{\mathrm{c}}
\newcommand{\rmd}{\mathrm{d}}
\newcommand{\rme}{\mathrm{e}}
\newcommand{\rmm}{\mathrm{m}}
\newcommand{\bbB}{\mathbb{B}}
\newcommand{\bbC}{\mathbb{C}}
\newcommand{\bbE}{\mathbb{E}}
\newcommand{\bbL}{\mathbb{L}}
\newcommand{\bbN}{\mathbb{N}}
\newcommand{\bbR}{\mathbb{R}}
\newcommand{\bbS}{\mathbb{S}}
\newcommand{\bbU}{\mathbb{U}}
\newcommand{\scC}{\mathscr{C}}
\DeclareMathAlphabet{\mathbsf}{OT1}{cmss}{bx}{n}
\DeclareMathAlphabet{\mathssf}{OT1}{cmss}{m}{sl}
\newcommand{\rvD}{\mathsf{D}}
\newcommand{\rvH}{\mathsf{H}}
\newcommand{\rvX}{\mathsf{X}}
\DeclareSymbolFont{bsfletters}{OT1}{cmss}{bx}{n}  
\DeclareSymbolFont{ssfletters}{OT1}{cmss}{m}{n}
\DeclareMathSymbol{\bsfGamma}{0}{bsfletters}{'000}
\DeclareMathSymbol{\ssfGamma}{0}{ssfletters}{'000}
\DeclareMathSymbol{\bsfDelta}{0}{bsfletters}{'001}
\DeclareMathSymbol{\ssfDelta}{0}{ssfletters}{'001}
\DeclareMathSymbol{\bsfTheta}{0}{bsfletters}{'002}
\DeclareMathSymbol{\ssfTheta}{0}{ssfletters}{'002}
\DeclareMathSymbol{\bsfLambda}{0}{bsfletters}{'003}
\DeclareMathSymbol{\ssfLambda}{0}{ssfletters}{'003}
\DeclareMathSymbol{\bsfXi}{0}{bsfletters}{'004}
\DeclareMathSymbol{\ssfXi}{0}{ssfletters}{'004}
\DeclareMathSymbol{\bsfPi}{0}{bsfletters}{'005}
\DeclareMathSymbol{\ssfPi}{0}{ssfletters}{'005}
\DeclareMathSymbol{\bsfSigma}{0}{bsfletters}{'006}
\DeclareMathSymbol{\ssfSigma}{0}{ssfletters}{'006}
\DeclareMathSymbol{\bsfUpsilon}{0}{bsfletters}{'007}
\DeclareMathSymbol{\ssfUpsilon}{0}{ssfletters}{'007}
\DeclareMathSymbol{\bsfPhi}{0}{bsfletters}{'010}
\DeclareMathSymbol{\ssfPhi}{0}{ssfletters}{'010}
\DeclareMathSymbol{\bsfPsi}{0}{bsfletters}{'011}
\DeclareMathSymbol{\ssfPsi}{0}{ssfletters}{'011}
\DeclareMathSymbol{\bsfOmega}{0}{bsfletters}{'012}
\DeclareMathSymbol{\ssfOmega}{0}{ssfletters}{'012}
\newcommand{\hatC}{\hat{C}}
\newcommand{\tilC}{\tilde{C}}
\newcommand{\tilf}{\tilde{f}}
\newcommand{\hatM}{\hat{M}}
\newcommand{\tilP}{\tilde{P}}
\newcommand{\tilQ}{\tilde{Q}}
\newcommand{\tilS}{\tilde{S}}
\newcommand{\tilT}{\tilde{T}}
\newcommand{\hatx}{\hat{x}}
\newcommand{\hatX}{\hat{X}}
\newcommand{\tilX}{\tilde{X}}
\newcommand{\haty}{\hat{y}}
\newcommand{\hatY}{\hat{Y}}
\newcommand{\tilY}{\tilde{Y}}
\newcommand{\bara}{\bar{a}}
\newcommand{\barb}{\bar{b}}
\newcommand{\barx}{\bar{x}}
\newcommand{\eps}{\varepsilon}
\newcommand{\Bern}{\mathrm{Bern}}
\DeclareMathOperator{\supp}{supp}
\DeclareMathOperator{\var}{\mathrm{Var}}
\DeclareMathOperator{\cov}{\mathrm{Cov}}
\DeclareMathOperator{\rank}{rank}
\newcommand{\bzero}{\mathbf{0}}
\newcommand{\bone}{\mathbbm{1}}
\newcommand{\Wyner}{\mathrm{W}}
\newcommand{\GKW}{\mathrm{GKW}}
\newtheorem{problem}{Problem}[chapter] 
\newtheorem{conjecture}{Conjecture}[chapter] 
\newtheorem{convention}{Convention}[chapter] 
\newtheorem{assumption}{Assumption}[chapter] 
\newcommand{\qednew}{\nobreak \ifvmode \relax \else
      \ifdim\lastskip<1.5em \hskip-\lastskip
      \hskip1.5em plus0em minus0.5em \fi \nobreak
      \vrule height0.75em width0.5em depth0.25em\fi}
\title{Common Information, Noise Stability, and Their Extensions\footnote{Suggested Citation: Lei Yu and Vincent Y. F. Tan (2022), ``Common Information,
Noise Stability, and Their Extensions'', Foundations and Trends® in Communications and
Information Theory: Vol. 19, No. 3, pp 264--546. DOI: 10.1561/0100000122.}}
\author[1]{Yu, Lei}
\author[2]{Tan, Vincent Y. F.}
\affil[1]{School of Statistics and Data Science, LPMC, KLMDASR, and LEBPS, Nankai University, China; leiyu@nankai.edu.cn}
\affil[2]{Department of Mathematics, Department of Electrical and Computer Engineering, Institute of Operations Research and Analytics, National University of Singapore, Singapore; vtan@nus.edu.sg}
\begin{document}

\makeabstracttitle

\begin{abstract}
Common information is ubiquitous in information theory and related areas such as theoretical computer science and discrete probability. However, because there are multiple notions of common information, a unified understanding of the deep interconnections between them is lacking. This monograph seeks to fill this gap by leveraging a small set of mathematical techniques that are applicable across seemingly disparate problems. 


In Part~\ref{part:one}, we review the operational tasks and properties associated with Wyner's and G\'acs--K\"orner--Witsenhausen's (GKW's) common information. 
In Part~\ref{part:two},   we discuss  extensions of the former from the perspective of distributed source simulation.  This includes the R\'enyi common information which forms a bridge between Wyner's common information and  the exact common information. Via a surprising equivalence between the R\'enyi common information  of order~$\infty$ and the exact common information, we demonstrate  the existence of a  joint source in which the exact common information strictly exceeds Wyner's common information. Other closely related topics discussed in Part~\ref{part:two} include  the channel synthesis problem and the connection of Wyner's and exact common information  to the nonnegative rank of matrices.

In Part~\ref{part:three}, recognizing that GKW's common information is zero for most non-degenerate sources, we examine it with a more refined lens via the Non-Interactive Correlation Distillation (NICD) problem in which we quantify the agreement probability of  extracted bits from a bivariate source. We extend this to the noise stability problem which includes as special cases  the $k$-user NICD and $q$-stability problems. This allows us to seamlessly transition to discussing their connections to various conjectures in information theory and discrete probability, such as the Courtade--Kumar, Li--M\'edard and Mossell--O'Donnell conjectures. Finally, we consider functional inequalities  (e.g., the hypercontractivity and Brascamp--Lieb  inequalities), which constitute a further generalization of the noise stability problem in which the Boolean functions therein are replaced by nonnnegative functions. We demonstrate that the key ideas behind the proofs in Part~\ref{part:three} can be presented in a pedagogically coherent  manner and unified via information-theoretic and Fourier-analytic methods. 
\end{abstract}

\tableofcontents
\clearpage
 
\chapter{Introduction} \label{ch:intro}

\section{Motivation}\label{sec:motivation}
Let $X$ be the statistical description of a set of images whose foregrounds and backgrounds are those of an airplane and the  blue sky respectively. Let~$Y$, which is  correlated to~$X$, be the statistical description of another set of images whose foregrounds are those of a unicorn  and the blue sky respectively. It seems natural and intuitive that the common information in~$X$ and~$Y$ should be the number of bits needed to describe the blue sky, which is the common part of $X$ and~$Y$.  Can we make this observation precise and quantitative   for {\em arbitrary} $(X,Y)$ pairs? This monograph is centered on this fundamental question in information and probability theory. In other words, we would like to quantify, via an assortment of well-motivated measures, the {\em intrinsic similarity} or {\em common information} between two correlated random variables $X$ and $Y$. Regardless of what applications there may be, the pursuit of operationally meaningful measures that quantify the common information between two random variables seems to be an extremely worthy academic endeavor. This is especially so for researchers in information and coding theory, theoretical computer science, and cryptography who are seeking to understand the inherent difficulties in generating correlated bits   from a single joint source, or simulating a joint source using a single source of randomness in a distributed manner. 

In probability, statistics, and data analysis, there are numerous popular functionals of joint distributions that quantify the amount of correlation or dependence between two random variables $X$ and $Y$. If these random variables have joint distribution $\pi_{XY}$ and  means $\mu_X$ and $\mu_Y$ respectively, such paradigmatic examples include the {\em Pearson correlation coefficient} 
\begin{equation}
\rho(X;Y) := \frac{\cov(X,Y)}{\sqrt{\var(X)\var(Y)}}  = \frac{\bbE [(X-\mu_X)(Y-\mu_Y) ]}{ \sqrt{ \bbE[(X-\mu_X)^2] \bbE[(Y-\mu_Y)^2] } } \label{eqn:pcc}
\end{equation}
and the {\em Hirschfeld--Gebelein--R\'enyi (HGR)  maximal  correlation}
\begin{align}
\rho_{\mathrm{m}}(X;Y):=\sup_{f,g}\rho\big(f(X);g(Y) \big),\label{eq:mc}
\end{align}
where  the supremum is taken over all real-valued functions $f$ and $g$
such that $0<\var(f(X)),\var(g(Y))<\infty$. In addition, an 
information-theoretic quantity known as the {\em mutual information} 
\begin{equation}
I(X;Y) = \left\{ \begin{array}{cl}
\displaystyle  \int_{\mathcal{X} \times \mathcal{Y}}\log\bigg(\frac{\rmd \pi_{XY}}{\rmd (\pi_{X}\pi_Y)}\bigg)\, \rmd \pi_{XY}  &\mbox{if}\;\, \pi_{XY}\ll \pi_X\pi_Y \vspace{.03in}\\
+\infty & \mbox{otherwise}
\end{array} \right. ,\label{eq:mi}
\end{equation}
also serves to quantify the dependence between two random variables.  These measures have the property that they are zero if the two random variables are independent, fulfilling a basic requirement of any measure that quantifies the dependence between two random variables. These measures can be regarded as common information quantities between~$X$ and~$Y$, jointly distributed as $\pi_{XY}$. Indeed, the mutual information $I(X;Y)$ captures the amount of information
about~$X$ provided by observing~$Y$, as can observed in the celebrated distributed lossless compression theorem of  Slepian and Wolf~\cite{sw73,cover75}. Are there any other {\em operationally-motivated} measures that allow us to gain deeper insights on the common information between $X$ and $Y$ given their numerical values?


In information and coding theory, there are two canonical  examples of operationally-motivated common information measures that have been widely accepted since their inceptions in the 1970s. The first,  which was introduced in 1973,  is {\em G\'acs--K\"orner--Witsenhausen's (GKW's) common information}  \cite{gacs1973common, witsenhausen1975sequences}, defined as
\begin{equation}
C_{\GKW} (\pi_{XY})  := \sup_{f,g} H\big( f(X) \big),\label{eqn:gkw_ci_intro} 
\end{equation}
where the supremum is taken over all pairs of
deterministic functions $(f,g)$  defined respectively on $\mathcal{X}$
and $\mathcal{Y}$ such that $f(X)=g(Y)$ with $\pi_{XY}$-probability one.
 The second, which was introduced in 1975, is {\em Wyner's common information} \cite{WynerCI}, defined as 
\begin{equation}
C_{\Wyner} (\pi_{XY})  := \inf_{P_W P_{X|W} P_{Y|W} : P_{XY}=\pi_{XY} } I_P(W;XY), \label{eqn:wyner_ci_intro}
\end{equation}
where the infimum extends over   triples of random variables $(W,X,Y)\sim P_{WXY}$ such that $X-W-Y$ forms a Markov chain and $P_{XY}=\pi_{XY}$. 


\section{Overview of the  Monograph}
Our twin objectives in this monograph are as follows. Firstly, we seek to provide a concise review of these classical notions of common information. Secondly, we endeavor to connect these quantities to new notions of common information in the literature that have gained traction recently. A flowchart of the sections in this monograph is provided in Fig. \ref{fig:flowchart}.

\subsection{Part I: Classic Common Information Quantities}
We commence in Part~\ref{part:one} by reviewing  the operational tasks associated with the classical common information  quantities in~\eqref{eqn:gkw_ci_intro} and~\eqref{eqn:wyner_ci_intro} and describing their salient properties. This part consists of Sections~\ref{ch:wynerCI} and~\ref{ch:gkw} on Wyner's  and GKW's common information respectively.


\begin{figure}
\centering
\begin{overpic}[width=.99\columnwidth]{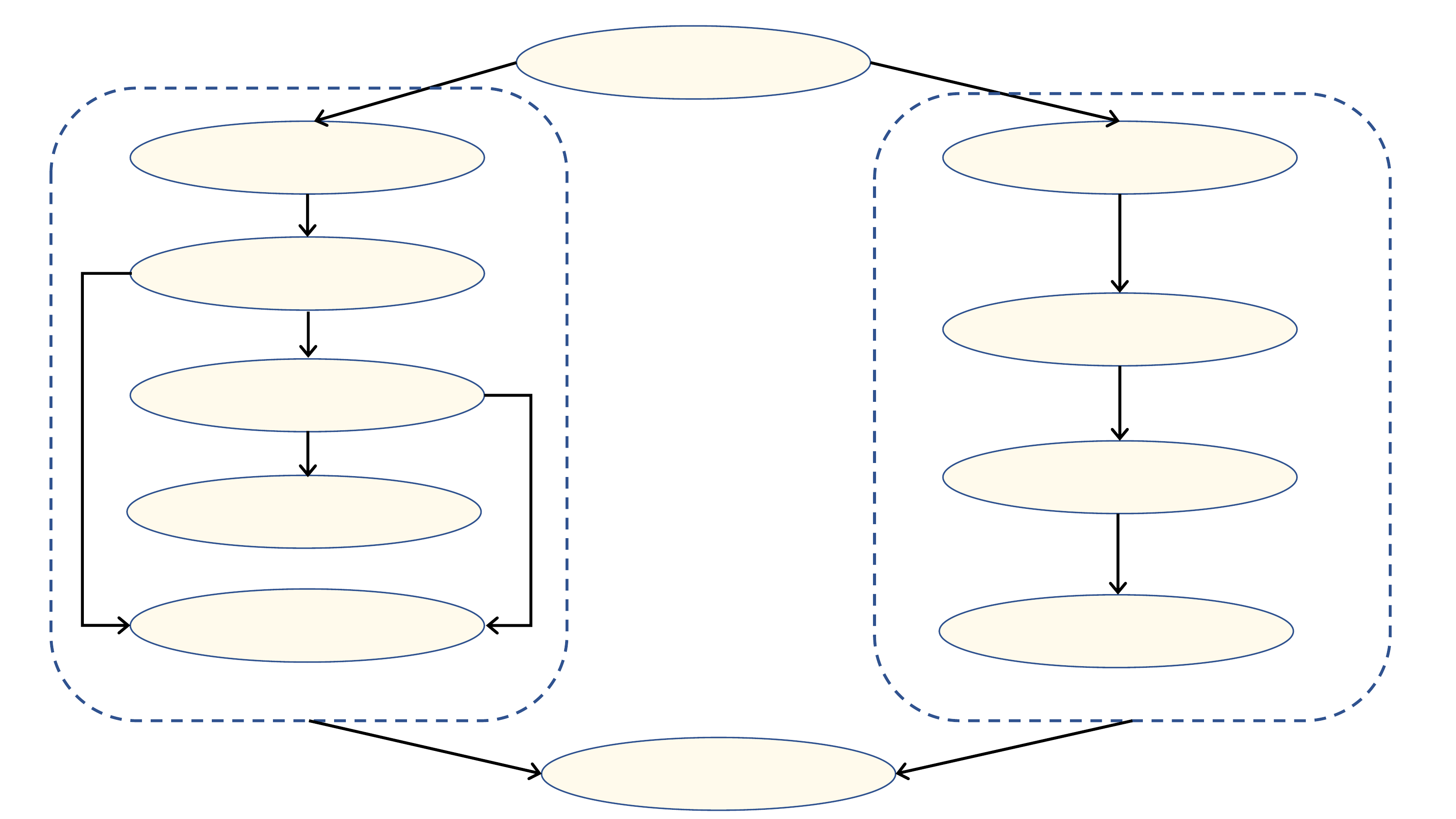}
{\footnotesize
\put(38,51){\mbox{\ref{ch:intro}. Introduction }}
\put(11.5,44.5){\mbox{\ref{ch:wynerCI}. Wyner's CI }}
\put(11.5,28.3){\mbox{\ref{ch:exact}. Exact CI }}
\put(11.5,36.5){\mbox{\ref{ch:renyi}. R\'enyi's CI }}
\put(11.5,20.5){\mbox{\ref{ch:ecs}. Ch.\ Synthesis}}
\put(11.5,12.5){\mbox{\ref{ch:nr}. NN Rank}}

\put(68,44.5){\mbox{\ref{ch:gkw}. GKW's CI }}
\put(68,32.6){\mbox{\ref{ch:NICD}. NICD}}
\put(68,22.8){\mbox{\ref{ch:Stability}. $q$-Stability}}
\put(68,12.5){\mbox{\ref{ch:funineq}. Func.\ Ineq.}}

\put(13,7.7){\mbox{$X-W-Y$}}
\put(68,7.7){\mbox{$U-X-Y-V$}}

\put(39.5,2.5){\mbox{\ref{ch:summary}. Open Probs.}}
}
\end{overpic}
\caption{Flowchart of the sections in this monograph (CI, NN, and NICD stand  for Common Information, Nonnegative, and Non-Interactive Correlation Distillation  respectively)}
\label{fig:flowchart}
\end{figure}


\subsection{Part II: Extensions of Wyner's Common Information}
We then extend and generalize Wyner's common information in Part~\ref{part:two} of this monograph, which consists of four sections. In Section~\ref{ch:renyi}, we review  the  {\em R\'enyi common information},  originally studied by the present authors~\cite{YuTan2018, yu2020corrections}. In his seminal paper~\cite{WynerCI}, Wyner used the normalized relative entropy
\begin{equation}
\frac{1}{n} D\left(P_{X^nY^n} \middle\| \pi_{XY}^n\right)=\frac{1}{n}\sum_{x^n,y^n}P_{X^nY^n}(x^n,y^n)\log\frac{P_{X^nY^n}(x^n,y^n)}{\pi_{XY}^n(x^n,y^n)}. 
\end{equation}
to quantify the discrepancy between the synthesized distribution $P_{X^nY^n}$ and the target distribution $\pi_{XY}^n $ and sought the minimum rate for distributed source synthesis for which this quantity vanishes as the blocklength $n$ grows. The R\'enyi common information \cite{YuTan2018,yu2020corrections} generalizes this to the case in which the discrepancy measures used belong to the families of normalized and unnormalized R\'enyi divergences. For R\'enyi order $1+s \in (0,\infty)\setminus\{1\}$,  the unnormalized form can be expressed as 
\begin{equation}
 D_{1+s}(P_{X^n Y^n} \| \pi_{XY}^n) = \frac{1}{s }\log\sum_{ x^n, y^n  } P_{X^nY^n}(x^n,y^n)\left( \frac{P_{X^nY^n}(x^n,y^n)}{\pi_{XY}^n(x^n,y^n)}\right)^s. 
\end{equation}

\enlargethispage{-\baselineskip}
We use this family of measures to build a bridge to the topic of discussion in Section~\ref{ch:exact}, namely, the {\em exact common information}, a quantity first defined and studied by \citet{KLE2014}; see Definition~\ref{def:ECI} for its precise definition. In contrast to the R\'enyi common information, the exact version requires that synthesized distribution
be   {\em exactly} equal to the target distribution for some blocklength $n$; however, {\em variable-length} codes are permitted.   Using an unexpected equivalence between the unnormalized R\'enyi common information of order~$\infty$  (the limit of $D_{1+s}$ as $s\to\infty$) 
\begin{equation}
 D_{\infty}(P_{X^n Y^n} \| \pi_{XY}^n) = \log\max_{(x^n,y^n): P_{X^nY^n}(x^n,y^n)>0} \frac{P_{X^nY^n}(x^n,y^n)}{\pi_{XY}^n(x^n,y^n)},
\end{equation}
 and the exact common information, we argue that the latter can be strictly larger than Wyner's common information for some sources, specifically the doubly symmetric binary source (DSBS). 

In Section~\ref{ch:ecs}, we use the preceding notions to describe the problem of {\em channel synthesis}. We review this problem in both the approximate and exact settings and show that it produces a continuum  of common information measures that interpolate from the mutual information to Wyner's or exact common information. 

In Section~\ref{ch:nr}, we describe a seemingly tangential topic in numerical linear algebra, namely  the {\em nonnegative rank} of nonnegative  matrices~\cite{Vavasis,gillisBook}. It turns out that this area of research has intimate connections to the preceding notions of common information, leading to some  interesting open problems.

%
%

\subsection{Part III: Extensions of G\'acs--K\"orner--Witsenhausen's Common Information}
It is known that GKW's common information is zero for most non-pathological sources such as the doubly symmetric binary source and the bivariate Gaussian source. Consequently,  in itself,   GKW's common information does not provide any tangible quantification of how ``similar'' two sources are. The goal of Part~\ref{part:three}, which consists of three sections, is thus to consider  several  {\em refinements} of GKW's common information in which new  insights can be  readily gleaned. 

We start  in Section~\ref{ch:NICD} by  providing an extensive discussion of the  {\em  $2$-user   Non-Interactive Correlation Distillation} (NICD) problem \cite{kamath2016non,mossel2006non}. Given a pair of random vectors  $(X^n,Y^n) \sim \pi_{XY}^n$ in which each $(X_i,Y_i)$ is drawn independently from a DSBS, this problem concerns the  agreement probability   of the random bits that can be extracted from $X^n$ and $Y^n$ individually. In other words, we wish to quantify
\begin{equation}
\max_{f,g}\; \Pr \big(U=V\big)\quad \mbox{and}\quad \min_{f,g}\; \Pr \big(U=V \big),
\end{equation}
where $U=f(X^n)$ and $V=g(Y^n)$ and $f $ and $g$ are $\{0,1\}$-valued  (i.e., Boolean) functions such that the marginals $\Pr(U=1)$ and $\Pr(V=1)$ are appropriately constrained. For example, for the maximization version of the NICD problem, we place upper bounds on $\Pr(U=1)$ and $\Pr(V=1)$.   
 We quantify these agreement probabilities by studying various geometric structures such as Hamming subcubes and Hamming balls. We discuss their optimality in several asymptotic regimes (such as the central limit or large deviations regimes) using results from concentration of measure  and Boolean Fourier analysis, among other techniques.
 
 In Section~\ref{ch:Stability}, we  extend the NICD problem to the {\em multi-user} version. For the $k$-user case, there are $k$ correlated sources $X_1^n, X_2^n,\ldots, X_k^n$ that are generated independently conditioned on another source $Y^n$ such that the joint distribution of $X_i^n$ and $Y^n$ is $\pi_{XY}^n$. 
 We are interested in quantifying 
 \begin{equation}
 \max_{f_1,f_2,\ldots, f_k} \; \Pr\big(U_1 = U_2 = \ldots = U_k\big),
 \end{equation}
 where $U_i = f_i(X_i^n), i = 1,2,\ldots, k$ and the maximum extends over all $k$-tuples of Boolean  functions $f_i$'s whose marginals are also constrained by placing upper bounds on  $\Pr(U_i=1)$.
  We also discuss the connection of the $k$-user NICD problem to {\em $q$-stability} \cite{eldan2015two,li2021boolean} in which the number of users $k$ is replaced by an arbitrary real number $q$.  This allows us to seamlessly segue into a review of recent  advances in  contemporary conjectures in information theory and discrete probability. These include  the Courtade--Kumar conjecture \cite{courtade2014boolean}, the Mossel--O'Donnell conjecture \cite{mossel2005coin}, and  the Li--M\'edard conjecture \cite{li2021boolean}. Mathematical tools used here include the analysis of Boolean functions \cite{ODonnell14analysisof} and, in particular, edge-isoperimetric inequalities and the study of the maximal degree-$1$ Fourier weight.
  
   In Section~\ref{ch:funineq}, we connect these notions and results to {\em functional inequalities} including  the  hypercontractivity, the logarithmic Sobolev, the Brascamp--Lieb inequalities, as well as their strengthened counterparts. This section generalizes the preceding two sections in that the Boolean functions $f_i$ are replaced by arbitrary nonnnegative functions. 
   
   The monograph is concluded in Section~\ref{ch:summary} in which we summarize   open problems in this fascinating area of study. 

The common theme in  Part~\ref{part:two}  is the Markov chain $X-W-Y$; this corresponds to the constraint that defines Wyner's common information in~\eqref{eqn:wyner_ci_intro}. In contrast, in Part~\ref{part:three}, we focus   on the Markov chain  $U-X-Y-V$; this  corresponds to the Markov chain in the NICD problem in which $U=f(X^n) $ and $V=g(Y^n)$ for some Boolean functions $f$ and~$g$. It is also present in GKW's  common information. At first glance, this appears to be different from the constraint in \eqref{eqn:gkw_ci_intro}. However, this constraint   is merely a special case of   $U-X-Y-V$ by taking~$U$ and~$V$ to be deterministic functions of $X$ and $Y$ respectively such that they are also constrained to be equal almost surely.  

\section{Notation}
To appreciate the material in this monograph, the reader is expected to have
some   background in information theory at the level of \citet{Cov06}. We will also make  frequent use
of the method of types, for which an  excellent exposition   can be found
in \citet{Csi97}. 

In this monograph, we generally follow the notation in \citet{Cov06}, \citet{elgamal}, and \citet{Csi97}.

\subsection{Random Variables and Probability Distributions} \label{sec:rvs}
Random variables and their realizations are denoted by upper case letters (such as $X$ and~$Y$) and  lower case letters (such as $x$ and $y$) respectively. The sets of values that the realizations take on, also called {\em alphabets}, are denoted by calligraphic letters such as $\calX$ and $\calY$. We use $P_X, \tilP_X, Q_X$, and  $\pi_X$ to
denote various probability distributions on alphabet  $\calX$. If a random variable $X$ is distributed according to $P_X$,   we write $X\sim P_X$.
As we work with both discrete and continuous random variables in this monograph, we will often have to distinguish between probability mass functions (PMFs) for discrete random variables and probability density functions (PDFs) for continuous random variables. 
If $X$ is discrete, we use $x\in \calX \mapsto P_X(x)$ to denote its PMF. The PDF of a  (real-valued) continuous random variable is denoted as 
$f_X: x\in\bbR\mapsto  (\rmd P_X/ \rmd \mu )(x)$, where $\mu$ is the Lebesgue measure on $\bbR$.  These will also be denoted as $P$ or $f$
when the random variable $X$ is clear from the context. 
Throughout the monograph, the notations $\pi_X$ and $\pi_{XY}$ are   reserved for {\em target} and {\em source distributions}. 

The set of PMFs on $\calX$ is denoted as $\calP(\calX )$ and the set of conditional PMFs  on $\calY$ given a variable taking values in $\calX$ is denoted as $\calP(\calY|\calX) = \{P_{Y|X} : P_{Y|X}(\cdot |x) \in\calP(\calY),x\in\calX\}$. The joint distribution induced by  $P_X\in\calP(\calX)$ and $P_{Y|X}\in\calP(\calY|\calX)$ is denoted as $P_XP_{Y|X} \in\calP(\calX\times\calY)$. The {\em support} of a discrete distribution is denoted as $\supp(P) := \{x\in\calX : P(x)>0\}$. Given an input distribution $P_X\in\calP(\calX)$ and a conditional distribution $P_{Y|X}\in\calP(\calY|\calX)$, if the induced  output distribution is $P_Y(y) = \sum_x P_X(x)P_{Y|X}(y|x)$ (for the discrete case), we write this as $P_X\rightarrow P_{Y|X}\rightarrow P_Y$.  For two distributions $P$ and $Q$ (defined on the same measurable space), we use $P \ll Q$ to denote that $P$ is {\em absolutely continuous}  with respect to $Q$. 
In the finite alphabet case, $P\ll Q$ means that for every $x\in\calX$ such that $Q(x)=0$, it   holds that $P(x)=0$. 

We say that three  random variables $X, Y$, and $Z$ {\em form a Markov chain in this order} if $X$ and $Z$ are conditionally independent given $Y$.  In this case, we write $X - Y- Z$.  For discrete random variables, $X - Y- Z$ if and only if  $P_{XYZ}(x,y,z)= P_{Y}(y)P_{X|Y}(x|y) P_{Z|Y}(z|y)$ for all $(x,y,z)\in\calX\times\calY\times\calZ$.  As is customary in information theory, for two integers $m$ and $n$, we write $X_m^n$ to mean the random vector $(X_m, X_{m+1}, \ldots, X_n)$; when $m=1$, this is abbreviated to $X^n$. A particular realization of $X^n$, a deterministic vector, is denoted as $x^n = (x_1,x_2,\ldots, x_n)$.  We denote the {\em $n$-fold product distribution} of $P$ as $P^n$, which is defined by the formula $P^n(x^n)=\prod_{i=1}^n P(x_i)$ for all $x^n\in\calX^n$.

\enlargethispage{-\baselineskip}
A \emph{stationary memoryless source}, denoted by $X \sim P_X\in\calP(\calX)$, is a discrete-time stochastic process $\{X_i \}_{i\in\bbN}$ such that $X_i$'s are independent   copies of $X$. We also denote a source $X$ by its distribution $P_X$. We use $X^n$ to denote the first $n$ random variables in the stochastic process $\{X_i \}_{i\in\bbN}$. With a slight abuse of terminology, $X^n$ is also  called a \emph{source sequence} of the source $X$. 
A \emph{stationary memoryless channel}, denoted by $ P_{Y|X}\in\calP(\calY|\calX)$, is a random transformation  that outputs a length-$n$ random vector $Y^n \sim  P_{Y|X}^n(\cdot|x^n)$ if the input is the length-$n$   vector $x^n \in \calX^n$. Since we deal almost exclusively with stationary memoryless sources and channels in this monograph, we will  omit the term ``stationary memoryless'' when we mention sources and channels.

We will work mainly with three types of random variables in this monograph. A discrete {\em uniform} random variable $X$ takes equal probabilities on its support $\calX$ and its probability distribution is denoted as $\mathrm{Unif}(\calX)$.   A {\em Bernoulli} random variable $X$ is one with support $\{0,1\}$. Its probability distribution is abbreviated as $\mathrm{Bern}(a)$ if $\Pr(X=1)=a$. A ($d$-dimensional) {\em normal} or {\em Gaussian} random variable or vector $X$ has a PDF that is denoted by 
\begin{equation}
\bx\in\bbR^d\mapsto\calN(\bx;\bm{\mu},\bm{\Sigma}) = \frac{1}{\sqrt{(2\pi)^d \ \mathrm{det}(\bm{\Sigma})} }\exp\Big(-\frac{1}{2}(\bx- \bm{\mu})^\top\bm{\Sigma}^{-1} (\bx- \bm{\mu})\Big),
\end{equation}
 (or simply $\calN(\bm{\mu},\bm{\Sigma})$) where $\bm{\mu}$ and  $\bm{\Sigma}$  are the mean vector and the covariance matrix respectively.

\subsection{Types or Empirical Distributions}
We will often use the method of types~\cite{Csi97}  in our calculations, especially for finite alphabets. Given a sequence $x^n \in\calX^n$, we use 
\begin{equation}
T_{x^n} (a) :=\frac{1}{n}\sum_{i=1}^n \mathbbm{1}\{x_i=a\}\quad  \mbox{for all} \;\, a\in\calX
\end{equation}
to denote its {\em type} or {\em empirical distribution}. The type of a length-$n$ sequence will be denoted by~$T$ or $T_X^{(n)}$ depending on the context. The set of all sequences with type $T$ is denoted as $\calT_T \subset\calX^n$. This is known as the {\em type class} of $T$. The set of all types that can be formed from sequences of length $n$ taking values in alphabet $\calX^n$ is denoted as $\calP_n(\calX)$, which is a subset of the probability simplex $\calP(\calX)$. 

\subsection{Information Measures}
We now recap the necessary information measures used in this monograph. For $X\sim P_X$, we denote its {\em Shannon entropy} as 
\begin{equation}
H(X)=H_P(X) = H(P_X) :=-\sum_{x\in \mathrm{supp}(P_X)} P_X(x)\log P_X(x). \label{eqn:shan_ent}
\end{equation}
All logarithms are to the base $2$ unless otherwise specified. For $(X,Y)\sim P_{XY}$, we denote the {\em conditional entropy} of $X$ given $Y$ as 
\begin{align}
H(X|Y) &= H_P(X|Y) = H(P_{X|Y}|P_Y)  \nn\\
&:=-\sum_{y\in\calY} P_Y(x)\sum_{x\in \mathrm{supp}(P_{X|Y} (\cdot|y)) } P_{X|Y}(x|y)\log P_{X|Y}(x|y).  \label{eqn:cond_ent}
\end{align}
The {\em mutual information} between $X$ and $Y$ where $(X,Y)\sim P_{XY}$ is denoted as 
\begin{equation}
I_P(X;Y) = I(P_X, P_{Y|X}) := H_P(X) - H_P(X|Y). \label{eqn:mi}
\end{equation}
The subscripts in $H_P$ and $I_P$ are used to emphasize the distribution of $(X,Y)$ under which these information measures are computed. When the distribution is clear from the context, the subscripts will be omitted.  The {\em relative entropy} or {\em Kullback--Leibler divergence} between two distributions $P_X$ and $Q_X$ defined on the same (countable) alphabet is\footnote{This definition is only applicable when the alphabets are countable, and the convention $x/0=\infty $ for $x>0$ is adopted.  For $P_X$ and $Q_X$ defined on a general probability  space,  the ratio ${P_X}/{Q_X}$ should be replaced with the 
Radon--Nikodym derivative ${\mathrm{d}P_{X}}/{\mathrm{d}Q_{X}}$ (if $P_{X}\ll Q_{X}$), and the expectation with respect to $P_{X}$  should be  written as a  Lebesgue integral over $\calX$.  If $P_{X}$ is not absolutely continuous with respect to $Q_{X}$, $D(P_X\|Q_X)$ is defined to be~$+\infty$. In the following, for simplicity, we only provide definitions of information-theoretic quantities for countable alphabets.}
\begin{equation}
D(P_X\| Q_X) :=\sum_{x\in\mathrm{supp}(P_X)} P_X(x)\log\frac{P_X(x)}{Q_X(x)}.
\end{equation}
The {\em conditional relative entropy} of two conditional distributions $P_{Y|X}$ and $Q_{Y|X}$, given a distribution $P_X$, is 
\begin{equation}
D(P_{Y|X} \| Q_{Y|X}|P_X) := D(P_X P_{Y|X} \| P_X Q_{Y|X}) . \label{eqn:cond_RE}
\end{equation}

In addition to the Shannon information measures above, we need to recap the family of R\'enyi information measures \cite{renyi1959measures,Erven}  as this is central to the majority of our discussion in this monograph. For two distributions $P_X, Q_X\in\calP(\calX)$ on a countable set $\calX$, the {\em R\'enyi divergence} of order $1+s\in (0,1)\cup(1,\infty)$ is 
\begin{equation}
D_{1+s}(P_X\|Q_X) :=\frac{1}{s}\log\sum_{x\in\mathrm{supp}(P_X)}P_X(x) \left(\frac{P_X(x)}{Q_X(x)} \right)^{s}. \label{eqn:renyi_div}
\end{equation} 
The R\'enyi divergence is monotonically nondecreasing in its order. 
Sibson's~\cite{sibson1969information} version of the  {\em conditional R\'enyi divergence} between two conditional distributions $P_{Y|X}$ and $Q_{Y|X}$ given a distribution $P_X$ is 
\begin{equation}
D_{1+s}(P_{Y|X}\| Q_{Y|X} | P_X) := D_{1+s}(P_X P_{Y|X} \| P_X Q_{Y|X}) .\label{eqn:cond_renyi_div}
\end{equation}
We note that while the conditional relative entropy in \eqref{eqn:cond_RE} is the expectation of\\
$D(P_{Y|X}(\cdot |X) \| Q_{Y|X}(\cdot |X) )$ over $X\sim P_X$, the conditional R\'enyi divergence  in~\eqref{eqn:cond_renyi_div} depends on $D_{1+s}(P_{Y|X}(\cdot |X) \| Q_{Y|X}(\cdot |X) )$ in a more involved way; indeed, it is a generalized mean of the random variable  $D_{1+s} ( P_{Y |X} (\cdot |X ) \| Q_{Y |X }( \cdot |X ))$ evaluated at $s$. For a more detailed discussion on this point, the reader is referred to \citet{caiVerdu}. We also note that there are other definitions of the conditional R\'enyi divergence but we will use the definition in \eqref{eqn:cond_renyi_div} in this monograph; see \cite{sibson1969information, csiszar1995generalized, bleuler20conditional}. The R\'enyi divergence and its conditional version in \eqref{eqn:cond_renyi_div} can be extended to all orders $1+s \in \{0,1,\infty\}$ by taking the appropriate limits. In particular, when $s\to 0$, we recover the usual relative entropy. An order of the R\'enyi divergence that will be of particular interest to us in this monograph is the {\em R\'enyi divergence of order $\infty$}. This is the divergence we obtain when we let $s\to \infty$, i.e., 
\begin{equation}
D_\infty (P_X\| Q_X) := \log \sup_{x\in\mathrm{supp}(P_X)}\frac{P_X(x)}{Q_X(x)}. \label{eqn:max_div}
\end{equation}

The {\em R\'enyi entropy} of order $1+s \in (0,1) \cup (1,\infty)$ of a probability mass function $P_X \in \calP(\calX)$ is defined as 
\begin{equation}
H_{1+s}(P_X) = -\frac{1}{s}\log \sum_{x\in\mathrm{supp}(P_X)} \big(P_X(x)\big)^{1+s}.\label{eqn:renyi_ent2}
\end{equation}
It is easy to check that 
\begin{align}
H_{1+s}(P_X) := \log |\calX| - D_{1+s}(P_X \| \mathrm{Unif}(\calX) ). \label{eqn:renyi_ent} 
\end{align}
Similarly to the R\'enyi divergence, we define $H_0(P_X)$ and $H_\infty(P_X)$ as the limits of $H_{1+s}(P_X)$ as $s\downarrow -1$ and $s\to\infty$ respectively. These are known as the {\em max-entropy} and {\em min-entropy} respectively. Of special importance is the case when $s\to 0$, in which case $H_{1+s}(P_X)$ reduces to the Shannon entropy defined in \eqref{eqn:shan_ent}. Since the relation in \eqref{eqn:renyi_ent}  holds and the R\'enyi divergence is nondecreasing in its order, the R\'enyi entropy is nonincreasing in its order. 

We need one additional measure of the discrepancy between two distributions. The {\em total variation distance} or simply the {\em TV distance} is defined for two distributions $P$ and $Q$ on a common (countable) alphabet $\calX$ as 
\begin{equation}
|P-Q|:=\frac{1}{2}\sum_{x\in\calX}|P(x)-Q(x)|.
\end{equation}
More generally, $|P-Q|= \sup_{\calA\subset \calX} |P(\calA)-Q(\calA)|$, 
where $\calA$ runs over all (measurable) subsets of $\calX$. Pinsker's inequality yields the following bound on the TV distance in terms of the relative entropy 
\begin{equation}
|P-Q|^2\le \frac{\ln 2}{2}\cdot D(P\|Q).\label{eqn:pinsker}
\end{equation}

\subsection{Typical Sets}
In our achievability proofs, we will often need to use the notion of {\em typical sets}~\cite{Cov06,elgamal,OrlitskyRoche}.  The {\em $\epsilon$-strongly typical set} with respect to a distribution $P_X\in\calP(\calX)$ is defined as 
\begin{equation}
\calT_\epsilon^{(n)}(P_X):= \Big\{ x^n\in\calX^n: \big|T_{x^n}(x) -  P_X(x)\big|\le \epsilon P_X(x),\forall\, x\in \calX \Big\}.
\end{equation}
This notion of typicality, proposed by \citet{OrlitskyRoche}, is also commonly known as {\em robust typicality} and is convenient for coding problems with cost constraints or rate-distortion problems. However, it suffers from the deficiency that it is amenable  only to {\em finite} alphabets. This is mitigated by the availability of the {\em $\epsilon$-weakly typical set} with respect to a distribution $P_X\in\calP(\calX)$, which is defined as 
\begin{equation}
\calA_\epsilon^{(n)}(P_X) := \bigg\{ x^n\in\calX^n: \Big|\frac{1}{n}\log\frac{1}{P_X^n(x^n)}-H(P_X) \Big|<\epsilon \bigg\}.
\end{equation}
When $X$ is a continuous random variable, $H(P_X)$ is to be replaced by the {\em differential entropy} of $X$ \cite{Cov06}. The conditional versions of these sets can be defined in a natural manner, e.g., the {\em conditionally $\epsilon$-strongly typical set} of $Y$ given a sequence $x^n\in\calX^n$ is 
\begin{equation}
\calT_\epsilon^{(n)}(P_{XY}|x^n):=\left\{y^n\in\calY^n: (x^n,y^n)\in\calT_\epsilon^{(n)}(P_{XY}) \right\}.
\end{equation}

\subsection{Asymptotic Notations}
Asymptotic notation is used in the usual way~\cite{Cor03}. Given two real-valued sequences $\{a_n\}_{n=1}^\infty \subset\bbR$ and $\{b_n\}_{n=1}^\infty\subset\bbR$, we say that  $a_n=O(b_n)$ if $\limsup_{n\to\infty} |a_n/b_n|<\infty$,  $a_n=\Omega(b_n)$ if $\liminf_{n\to\infty} |a_n/b_n|>0$, and $a_n=\Theta(b_n)$ if $a_n=O(b_n)$ and $a_n=\Omega(b_n)$. Similarly, $a_n=o(b_n)$ if $\lim_{n\to\infty} |a_n/b_n|=0$. Finally, if $\{a_n\}_{n=1}^\infty $ and $\{b_n\}_{n=1}^\infty$ are positive sequences,   we write $a_n\doteq b_n$ if these sequences are {\em equal to first-order in the exponent}~\cite{Cov06}, i.e., $\lim_{n\to\infty} n^{-1}\log (a_n/b_n)=0$. 

\subsection{Miscellaneous}
For two integers $m$ and $n$, we write $[m:n] = \{m,m+1,\ldots, n\}$ to denote the discrete interval. When $m=1$, this is abbreviated  as $[n]$. Often, for an $R>0$,  we write  $[2^{nR}]$ to refer to the set $\{1,2,\ldots, \lfloor 2^{ nR}\rfloor \}$. Given a number $a\in [0,1]$, we write $\overline{a}:=1-a$.  Given two numbers $a,b\in [0,1]$, we write $a\ast b:=a\barb+b\bara$ to denote their binary convolution. We write $[a]^+$ to mean $\max\{a,0\}$ for $a\in\bbR$.  For two bits $a,b\in\{0,1\}$, $a\oplus b$ denotes the binary addition (modulo-$2$ sum) operation, i.e., $a\oplus b=0$ if $a=b$ and~$1$ otherwise.  Logarithms are always to the base~$2$ unless otherwise specified. When we write $\ln$, we are referring to the natural logarithm (to base $\rme$). 

Vectors are interchangeably denoted by boldface lower case font  (e.g.,~$\bu$) or, as mentioned   in Section~\ref{sec:rvs}, with a lower case letter and with a superscript indicating its length (e.g., $u^n = (u_1,u_2,\ldots, u_n)$). Matrices (e.g., $\bM$) are denoted in boldface  upper case font. The $i^{\mathrm{th}}$ element of a vector $\bu$ is denoted interchangeably as $u_i$ or $[\bu]_i$. Similarly, the $(i,j)^{\mathrm{th}}$ element of a matrix $\bM$ is denoted interchangeably  as $M_{i,j}$ or $[\bM]_{i,j}$. 

\section{Mathematical Tools}

\subsection{The Method of Types}
We summarize a few key property of types which will turn out to be useful in proving both achievability and converse parts of various common information problems, particularly those with finite alphabets. For an extensive discussion, the reader is referred to the book by \citet{Csi97}. 

First, the number of types $|\calP_n(\calX)| \le (n+1)^{|\calX|}$ is polynomial in~$n$. Second, for a given type $T\in\calP_n(\calX)$, the size of the type class $(n+1)^{-|\calX|}2^{n H(T)}\le|\calT_{T}|\le 2^{n H(T)}$ is related to the entropy of the type $H(T)$. Third, the $Q^n$-probability of a sequence $x^n \in\calT_T$ is $Q^n(x^n) = 2^{-n(D(T \| Q) + H(T))}$. Consequently, the $Q^n$-probability of the type class $\calT_T$ is bounded as $(n+1)^{-|\calX|}2^{-nD(T\|Q)}\le Q^n(\calT_T)\le 2^{-nD(T\|Q)}$. 


A particularly useful result that we use repeatedly in  Part~\ref{part:three} of the monograph is {\em Sanov's theorem}~\cite{Sanov61,Dembo, Cov06}, so we review it here. 
\begin{theorem}[Sanov's theorem] \label{thm:sanov}
Let the components of  the random vector $X^n = (X_1, X_2, \ldots, X_n)$ be generated in an independently and identically distributed (i.i.d.) manner from a PMF $Q \in \calP(\calX)$. For any $n\in\bbN$ and any set of distributions $\calS \subset \calP(\calX)$, 
\begin{equation}
Q^n\big(\{x^n: T_{x^n}\in  \calS \}\big) \le (n+1)^{|\calX|}2^{-nD(P^*\|Q)},
\end{equation}
where  the {\em information projection} of $Q$ onto $\calS$ is any distribution $P^*$ that satisfies
\begin{equation}
D(P^*\|Q)=\inf_{P\in\calS} D(P\|Q).
\end{equation}
If additionally, $\calS$ is equal to the   closure of its interior (under the relative topology),\footnote{This regularity condition will always be satisfied in the sections to follow.}
\begin{equation}
\liminf_{n\to\infty} -\frac{1}{n}\log Q^n\big(\{x^n: T_{x^n}\in  \calS \}\big)\ge D(P^*\|Q),
\end{equation}
and hence,
\begin{equation}
Q^n\big(\{x^n: T_{x^n}\in  \calS \}\big)\doteq  2^{-nD(P^*\|Q)}.
\end{equation}
\end{theorem}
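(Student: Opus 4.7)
The plan is to leverage the three pillars of the method of types that were reviewed just above Theorem~\ref{thm:sanov}: (i) the cardinality bound $|\calP_n(\calX)|\le (n+1)^{|\calX|}$; (ii) the upper bound $Q^n(\calT_T)\le 2^{-nD(T\|Q)}$ on the $Q^n$-probability of a type class; and (iii) the matching lower bound $Q^n(\calT_T)\ge (n+1)^{-|\calX|}2^{-nD(T\|Q)}$ whenever $T\in\calP_n(\calX)$. Writing $\calE_n:=\{x^n:T_{x^n}\in \calS\}$, I would first partition $\calE_n$ into its type classes and combine (i) and (ii) to get
\begin{equation}
Q^n(\calE_n)=\sum_{T\in\calP_n(\calX)\cap\calS} Q^n(\calT_T)\le (n+1)^{|\calX|}\max_{T\in\calP_n(\calX)\cap\calS} 2^{-nD(T\|Q)}.
\end{equation}
Since $\calP_n(\calX)\cap\calS\subseteq \calS$, the minimum of $D(\cdot\|Q)$ over the discrete subset is at least $\inf_{P\in \calS} D(P\|Q)=D(P^*\|Q)$, producing the first bound $Q^n(\calE_n)\le (n+1)^{|\calX|} 2^{-nD(P^*\|Q)}$.

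For the matching lower bound under the regularity hypothesis, I would fix $\varepsilon>0$ and use the assumption that $\calS$ equals the closure of its interior to pick $\tilde{P}\in\mathrm{int}(\calS)$ with $D(\tilde{P}\|Q)\le D(P^*\|Q)+\varepsilon/2$; such a $\tilde{P}$ exists because $P^*$ lies in the closure of the interior and we may perturb slightly into the interior. Because the types $\calP_n(\calX)$ become dense in the simplex as $n\to\infty$ (each coordinate is approximable to within $1/n$), for all $n$ sufficiently large there is a type $T_n\in\calP_n(\calX)\cap\mathrm{int}(\calS)\subseteq\calP_n(\calX)\cap \calS$ with $T_n\to \tilde{P}$. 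Continuity of $D(\cdot\|Q)$ on $\{P:\supp(P)\subseteq \supp(Q)\}$ then yields $D(T_n\|Q)\le D(P^*\|Q)+\varepsilon$ for $n$ large, and applying (iii) to the single type class $\calT_{T_n}\subseteq \calE_n$ gives
\begin{equation}
Q^n(\calE_n)\ge Q^n(\calT_{T_n})\ge (n+1)^{-|\calX|} 2^{-n(D(P^*\|Q)+\varepsilon)}.
\end{equation}
Taking $-\tfrac{1}{n}\log$, passing to $n\to\infty$, and sending $\varepsilon\downarrow 0$, this gives $\limsup_{n\to\infty}-\tfrac{1}{n}\log Q^n(\calE_n)\le D(P^*\|Q)$. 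Combined with the upper bound (which already supplies $\liminf_{n\to\infty}-\tfrac{1}{n}\log Q^n(\calE_n)\ge D(P^*\|Q)$), this produces the stated inequality and the exponential equivalence $Q^n(\calE_n)\doteq 2^{-nD(P^*\|Q)}$.

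The main obstacle is the approximation-by-types step in the lower bound. If $\supp(P^*)\not\subseteq \supp(Q)$ then $D(P^*\|Q)=\infty$ and the claim reduces to $Q^n(\calE_n)=0$ eventually, so the only case of substance is $\supp(P^*)\subseteq \supp(Q)$, where $D(\cdot\|Q)$ is finite and continuous and the perturbation argument is routine. The closure-of-the-interior hypothesis is precisely what excludes pathologies such as $\calS$ being a single boundary point of $\calP(\calX)$ that no type ever visits; it is exactly this condition that guarantees a type $T_n\in\calP_n(\calX)\cap \calS$ approximating $P^*$ exists for all large $n$.
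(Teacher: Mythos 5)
The paper states Theorem~\ref{thm:sanov} only as a review and cites \citet{Sanov61}, \citet{Dembo}, and \citet{Cov06} for the proof; it provides no proof of its own, so there is nothing internal to compare against. Your argument is the standard method-of-types proof, and your upper bound is exactly right. One nice touch: you correctly derive $\limsup_{n}-\tfrac{1}{n}\log Q^n(\calE_n)\le D(P^*\|Q)$ from the regularity hypothesis and observe that the complementary $\liminf\ge D(P^*\|Q)$ already follows from the finite-$n$ upper bound without regularity; this is the right logical decomposition, and in fact the inequality the theorem displays under the regularity hypothesis (as written, a $\liminf\ge$) is the one that was already implied by the first part, whereas what regularity buys is precisely the $\limsup\le$ you prove.

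The one genuine soft spot is your claim that the perturbation step is ``routine'' once $\supp(P^*)\subseteq\supp(Q)$. When $\supp(Q)\subsetneq\calX$, only types $T$ with $\supp(T)\subseteq\supp(Q)$ contribute to $Q^n(\calE_n)$, and the $\tilde P$ you perturb to must stay on the face $\calF:=\{P:\supp(P)\subseteq\supp(Q)\}$, since any $\tilde P\notin\calF$ has $D(\tilde P\|Q)=\infty$ and the continuity argument breaks. The regularity condition $\calS=\mathrm{cl}(\mathrm{int}(\calS))$ with the interior taken in $\calP(\calX)$ does not by itself ensure that $\mathrm{int}(\calS)\cap\calF$ accumulates at $P^*$; one can build $\calS$ satisfying the hypothesis, with $\supp(Q)\subsetneq\calX$ and $D(P^*\|Q)<\infty$, for which $Q^n(\calE_n)=0$ along a subsequence of $n$, so that the stated $\doteq$ conclusion fails. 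The fix is to work within $\calF$ throughout (equivalently, replace $\calX$ by $\supp(Q)$ and intersect $\calS$ with $\calF$ before taking interiors), or to assume outright that $Q$ has full support---which is the situation actually exploited in Part~\ref{part:three} and under which your argument goes through cleanly. This subtlety is glossed over in several textbook treatments as well, so the gap is inherited rather than introduced, but as written ``routine'' hides a real issue.
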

Sanov's theorem basically  says that the exponent of the probability that the type $T_{X^n}$ of a random sequence $X^n\sim Q^n$  belongs to a set $\calS$ is dominated by the relative entropy between the information projection  of $Q$ onto $\calS$ and~$Q$.

\subsection{Couplings} \label{sec:coupling}
In this monograph, we will often encounter the optimization problems over joint distributions for which their marginals are fixed. Such a joint distribution is known as a coupling. More precisely, a {\em coupling} $P_{XY}$ of two distributions $Q_X \in \calP(\calY)$ and $Q_Y \in \calP(\calY)$ is a joint distribution on $\calX\times \calY$ whose $X$- and $Y$-marginals are respectively $Q_X$ and $Q_Y$. The set of all couplings with marginals $Q_X$ and $Q_Y$ is denoted as 
\begin{equation}
\calC(Q_X, Q_Y) := \big\{P_{XY} \in \calP(\calX\times\calY): P_X = Q_X , P_Y=Q_Y\big\}. 
\end{equation}
Similarly, a conditional coupling $P_{XY|W}$ is a joint conditional distribution whose  $X$- and $Y$-marginals agree with  given marginals $Q_{X|W}$ and $Q_{Y|W}$ respectively. The set of all  conditional couplings with marginals $Q_{X|W}$ and $Q_{Y|W}$ is
\begin{align}
&\calC(Q_{X|W}, Q_{Y|W}) \nn\\*
&\quad:= \big\{P_{XY|W} \in \calP(\calX\times\calY|\calW): P_{X|W} = Q_{X|W} , P_{Y|W}=Q_{Y|W}\big\}.
\end{align}
%

Couplings have many beautiful properties, but we will not elaborate on them   in this monograph; see \citet{thorisson2000coupling} or \citet{YuTan2019} for example. One property that is quite remarkable is the {\em maximal coupling equality}   which says that given two distributions $Q_X$ and $Q_Y$, the total variation distance between them is equal to the probability that $X$ is not equal to $Y$ minimized over all couplings induced by $Q_X$ and $Q_Y$, i.e., 
\begin{equation}
\min_{P_{XY} \in \calC(Q_X,Q_Y)}\Pr(X\ne Y) = |Q_X-Q_Y|.
\end{equation}
A generalization of the maximal coupling equality that turns out to be useful in the GKW common information problem (Section~\ref{ch:gkw}) is stated as follows. This lemma is due to  the present authors \cite{YuTan2019}.

\begin{lemma}[Maximal guessing coupling equality]  
\label{thm:max_guess_coup} Given two distributions $Q_{X}$ and $Q_{Y}$,
we have 
\begin{equation}
\min_{P_{XY} \in \calC(Q_X,Q_Y)}\min_{f:{\cal X}\to{\cal Y}}\Pr \big( Y\ne f(X) \big)=\min_{f : \calX\to\calY} \big|Q_{Y}-Q_{f(X)}\big|.\label{eq:max_guess_coup}
\end{equation}
\end{lemma}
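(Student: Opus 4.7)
The plan is to reduce the maximal guessing coupling equality to the standard maximal coupling equality by fixing the function $f$ and then optimizing over $f$ afterwards. The key observation is that for each fixed $f \colon \calX \to \calY$, one should be able to show
\begin{equation}
\min_{P_{XY} \in \calC(Q_X,Q_Y)} \Pr\big(Y \neq f(X)\big) = \big|Q_Y - Q_{f(X)}\big|, \label{eqn:per-f}
\end{equation}
after which taking $\min_f$ on both sides yields \eqref{eq:max_guess_coup}.

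To establish the $\geq$ direction in \eqref{eqn:per-f}, I would fix any coupling $P_{XY} \in \calC(Q_X,Q_Y)$ and push forward via $(x,y) \mapsto (f(x),y)$. The resulting joint distribution on $\calY \times \calY$ has marginals $Q_{f(X)}$ and $Q_Y$, so the standard maximal coupling equality stated just above the lemma gives $\Pr(Y \neq f(X)) \geq |Q_{f(X)} - Q_Y|$. Taking the infimum over $P_{XY}$ preserves this lower bound.

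The $\leq$ direction is the more delicate step, requiring a lifting construction. I would let $P^*_{\tilde X Y} \in \calC(Q_{f(X)}, Q_Y)$ be an optimal coupling achieving $\Pr(\tilde X \neq Y) = |Q_{f(X)} - Q_Y|$ (which exists by the standard maximal coupling equality), and then define a coupling of $Q_X$ and $Q_Y$ through
\begin{equation}
P_{XY}(x,y) := Q_X(x) \, P^*_{Y \mid \tilde X}\big(y \mid f(x)\big),
\end{equation}
with the convention that the conditional is arbitrary on null events. Checking marginals: the $X$-marginal is clearly $Q_X$, while the $Y$-marginal is
\begin{equation}
\sum_{x} Q_X(x) \, P^*_{Y \mid \tilde X}\big(y \mid f(x)\big) = \sum_{z} Q_{f(X)}(z) \, P^*_{Y \mid \tilde X}(y \mid z) = Q_Y(y),
\end{equation}
so $P_{XY} \in \calC(Q_X,Q_Y)$. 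Moreover,
\begin{equation}
\Pr_{P_{XY}}\big(Y \neq f(X)\big) = \sum_{z} Q_{f(X)}(z) \sum_{y \neq z} P^*_{Y \mid \tilde X}(y \mid z) = \Pr_{P^*}\big(Y \neq \tilde X\big) = \big|Q_{f(X)} - Q_Y\big|,
\end{equation}
which establishes the reverse inequality and hence \eqref{eqn:per-f}.

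The main obstacle I anticipate is verifying the lifting works cleanly when $Q_{f(X)}$ is not fully supported (so that $P^*_{Y\mid \tilde X}(\cdot \mid z)$ is undefined for some $z$); this is handled by noting that the only values of $f(x)$ that arise in the sum are those in $\supp(Q_{f(X)})$, on which the conditional is well-defined. Finally, minimizing over $f$ on both sides of \eqref{eqn:per-f} yields \eqref{eq:max_guess_coup}.
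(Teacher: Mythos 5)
Your proof is correct, and the argument is the natural one. The paper itself does not present a proof of this lemma — it attributes the result to prior work of the authors and uses it as a ready-made tool — so there is no in-paper proof to compare against, but your two-step reduction is sound: the swap of the two minimizations is always valid, and for fixed $f$ you correctly establish both inequalities of the per-$f$ identity $\min_{P_{XY}\in\calC(Q_X,Q_Y)}\Pr(Y\ne f(X)) = |Q_Y - Q_{f(X)}|$. The $\geq$ direction via the pushforward under $(x,y)\mapsto(f(x),y)$ and the $\leq$ direction via the lifting $P_{XY}(x,y) = Q_X(x)\,P^*_{Y\mid \tilde X}(y\mid f(x))$ are both right, and you correctly address the only delicate point: the conditional $P^*_{Y\mid\tilde X}(\cdot\mid z)$ is only needed for $z\in\supp(Q_{f(X)})$, and indeed $Q_{f(X)}(f(x))\ge Q_X(x)>0$ whenever $x\in\supp(Q_X)$, so the lift is well-defined on a set of full $Q_X$-measure. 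The marginal checks and the computation of $\Pr_{P_{XY}}(Y\ne f(X))$ are all correct. No gaps.
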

The minimization problem on the left-hand side  of \eqref{eq:max_guess_coup}  is termed the {\em maximal guessing coupling
problem} (because we would like to maximize the probability that $Y$ is guessed correctly by $f$ acting on $X$). The minimization problem on the right-hand side  is a classical problem
in information theory which is termed the {\em distribution approximation} or {\em random number generation}
problem \cite[Chapter~2]{Han10}. Lemma~\ref{thm:max_guess_coup}  implies that the maximal guessing coupling
problem is equivalent to the distribution approximation problem.


The concept of coupling is naturally involved when we study a problem
involving Markov chains, e.g., Wyner's common information and its extensions. One
key step to analyze such problems is to simplify  multi-letter expressions that involve optimizations over couplings to single-letter ones.
This is conveniently facilitated by the {\em chain rule}   on couplings. Before stating 
this, we first define the \emph{product coupling set}
\begin{align}
 & \prod_{i=1}^{n}\calC(Q_{X_{i}|X^{i-1}W},P_{Y_{i}|Y^{i-1}W}):=\Bigg\{\prod_{i=1}^{n}P_{X_{i}Y_{i}|X^{i-1}Y^{i-1}W}:\nonumber \\*
 & \qquad P_{X_{i}Y_{i}|X^{i-1}Y^{i-1}W}\in\calC(Q_{X_{i}|X^{i-1}W},Q_{Y_{i}|Y^{i-1}W}),i\in[n]\Bigg\}.
\end{align}

\begin{lemma}[Chain Rule for Coupling Sets] \label{lem:coupling}
For any pair of conditional distributions $(Q_{X^{n}|W},Q_{Y^{n}|W})$,
we have 
\begin{equation}
\prod_{i=1}^{n}\calC(Q_{X_{i}|X^{i-1}W},Q_{Y_{i}|Y^{i-1}W})\subset\calC(Q_{X^{n}|W},Q_{Y^{n}|W}).
\end{equation}
\end{lemma}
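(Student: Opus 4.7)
The plan is to take an arbitrary element of the product coupling set, say
$P_{X^{n}Y^{n}|W}(x^{n},y^{n}|w)=\prod_{i=1}^{n}P_{X_{i}Y_{i}|X^{i-1}Y^{i-1}W}(x_{i},y_{i}|x^{i-1},y^{i-1},w)$
with each factor in $\calC(Q_{X_{i}|X^{i-1}W},Q_{Y_{i}|Y^{i-1}W})$, and verify directly that its $X^{n}$- and $Y^{n}$-marginals (given $W$) coincide with $Q_{X^{n}|W}$ and $Q_{Y^{n}|W}$ respectively. By the symmetry between $X$ and $Y$, it suffices to establish only the first of these two marginal identities.

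The key observation is that the per-coordinate coupling constraint forces, for every $i$,
$\sum_{y_{i}}P_{X_{i}Y_{i}|X^{i-1}Y^{i-1}W}(x_{i},y_{i}|x^{i-1},y^{i-1},w)=Q_{X_{i}|X^{i-1}W}(x_{i}|x^{i-1},w)$,
and crucially the right-hand side does not depend on $y^{i-1}$. I would therefore compute $P_{X^{n}|W}$ by marginalizing out $y_{n},y_{n-1},\ldots,y_{1}$ in that order. Summing over $y_{n}$ collapses the last factor of the product to $Q_{X_{n}|X^{n-1}W}(x_{n}|x^{n-1},w)$; because this surviving factor has no $y^{n-1}$ dependence, the subsequent sum over $y_{n-1}$ acts solely on the $(n-1)$-th factor and reduces it to $Q_{X_{n-1}|X^{n-2}W}(x_{n-1}|x^{n-2},w)$. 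Iterating, one obtains $P_{X^{n}|W}(x^{n}|w)=\prod_{i=1}^{n}Q_{X_{i}|X^{i-1}W}(x_{i}|x^{i-1},w)$, which equals $Q_{X^{n}|W}(x^{n}|w)$ by the standard chain rule of conditional probability. The analogous telescoping of sums over the $x_{i}$'s yields $P_{Y^{n}|W}=Q_{Y^{n}|W}$.

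The argument is essentially a bookkeeping exercise and I anticipate no genuine obstacle; if I had to single out a delicate point, it is the need to emphasize that each $X_{i}$-marginal of a factor is independent of $y^{i-1}$ (and analogously for the $Y_{i}$-marginal and $x^{i-1}$), since this is exactly the feature that makes the successive marginalizations decouple cleanly into a telescoping product. A one-line alternative presentation would be induction on $n$: the base case $n=1$ is immediate from the definition of $\calC(Q_{X_{1}|W},Q_{Y_{1}|W})$, and the inductive step amounts to the single marginalization described above.
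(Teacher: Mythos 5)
Your proof is correct and is exactly the standard marginalization argument the paper has in mind (the paper itself omits the proof and merely references \citet{YuTan2020_exact}, noting elsewhere that verifying the marginals is routine). The key point you single out — that the $X_i$-marginal of each per-coordinate coupling has no $y^{i-1}$ dependence, so the sums over $y_n,\dots,y_1$ telescope cleanly — is precisely what makes the bookkeeping go through, and the inductive restatement you offer is a fine alternative presentation of the same idea.
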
 This lemma can be interpreted as follows. By the usual chain rule for joint distributions,
the conditional distributions $Q_{X^{n}|W}$ and $Q_{Y^{n}|W}$ can
be factorized as $\prod_{i=1}^{n}Q_{X_{i}|X^{i-1}W}$ and $\prod_{i=1}^{n}Q_{Y_{i}|Y^{i-1}W}$ respectively.
Let $P_{X_{i}Y_{i}|X^{i-1}Y^{i-1}W}$ be a coupling of each pair of component conditional distributions $(Q_{X_{i}|X^{i-1}W},Q_{Y_{i}|Y^{i-1}W})$.
Then,
this lemma says that the product of $P_{X_{i}Y_{i}|X^{i-1}Y^{i-1}W}$
forms a coupling of the product of $Q_{X_{i}|X^{i-1}W}$ and the product
of $Q_{Y_{i}|Y^{i-1}W}$.

The proof  of this lemma can be found in \citet{YuTan2020_exact}.

\part{Classic Common Information Quantities} \label{part:one}
\chapter{Wyner's Common Information}  \label{ch:wynerCI}

What constitutes a meaningful notion of the {\em common information} between  two random variables $X$ and $Y$? As mentioned in the Introduction, there are at least two such notions that have gained traction in the information theory community as well as adjacent communities such as theoretical computer science and cryptography. In this section, we   focus on {\em Wyner's common information}~\cite{WynerCI}. To motivate this fundamental quantity,  let us consider the special case in which $X$ and $Y$ can be written as $X=(\tilX,V)$ and $Y=(\tilY,V)$ where $\tilX$, $\tilY$ and $V$ are independent. It seems natural to define the amount of common information between $X$ and $Y$ as the entropy $H(V)$ of the common part they share, namely~$V$. Taking this idea (much) further is the subject of the current and later sections (in Part~\ref{part:two}).

We review the notion of  Wyner's common information from two seemingly disparate information processing tasks. We show that these perspectives are, somewhat surprisingly, equivalent. In Section~\ref{sec:dist_sim}, we consider the scenario in which one would like to {\em simulate} a joint distribution $\pi_{XY}$ given a single source of common randomness. The minimum amount of common randomness to obtain an asymptotically exact reconstruction of $\pi_{XY}$ constitutes   Wyner's common information between~$X$ and $Y$. The perspective concerning simulation of random variables is the common thread throughout the monograph. Nevertheless, we find it useful to  provide a complementary  perspective of Wyner's common information by revisiting the {\em Gray--Wyner   source coding} problem in Section~\ref{sec:wyner-GW}. In this problem, Wyner's common information is the minimum common rate $R_0$ such that the sum of the two private rates $R_1$ and $R_2$ and the common rate $R_0$ is constrained to be almost equal to the joint entropy of the source $H(XY)$. We evaluate   Wyner's common information for the doubly symmetric binary source (DSBS) and the symmetric binary erasure source (SBES) in Sections~\ref{sec:dsbs} and~\ref{sec:sbes} respectively. 

Moving on to more contemporary topics, in Section~\ref{sec:cts},  we discuss the subtleties and techniques to extend  Wyner's common information to continuous sources, allowing us to evaluate it for jointly Gaussian random variables. Finally,  in Section~\ref{sec:gen_wyner}, we discuss   several recent extensions and applications of Wyner's common information. 

\section{Distributed Simulation of a Target Joint Distribution} \label{sec:dist_sim}
How much common randomness is needed to simulate 
a joint source 
 $(X,Y)\sim \pi_{XY}$ in a distributed fashion? This problem, as depicted in Fig.~\ref{fig:source_sim} and termed {\em distributed source simulation},   was first  studied by \citet{WynerCI} in his celebrated paper on common information. In this problem, there is a {\em target distribution} $\pi_{XY}$ and we would like to use a uniform random variable and two distributed {\em processors} to approximate the product distribution $\pi_{XY}^n$ to an arbitrary precision as the number of copies $n$ of the target distribution   tends to infinity. What is the minimum cardinality (or rate) of the support of the uniform random variable such that this is achievable? 

Before turning to formal definitions and results, let us revisit the simple example in which $X=(\tilX,V)$ and $Y=(\tilY,V)$ for some tuple of independent random variables $\tilX$, $\tilY$ and $V$. Clearly, one can use a lossless source code to encode $V^n$ by a binary string  of length approximately $nH(V)$. This binary string is then sent through the processors. Shannon's lossless source coding theorem tells us that we can reconstruct~$V^n$ almost losslessly as long as $n$ is sufficiently large. Additionally, the processors can themselves generate $\tilX^n$ and $\tilY^n$ independently. Thus, it is clear that an achievable rate of common randomness is $H(V)$. It is also  plausible that any rate strictly below $H(V)$ is not achievable as the common part of $X$ and $Y$ cannot be reliably reconstructed. 

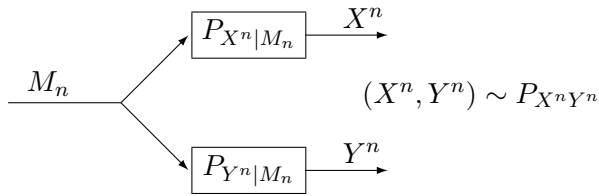
\begin{figure}
\centering \setlength{\unitlength}{0.05cm} 
{ \begin{picture}(100,60) 
\put(-5,30){\line(1,0){30}} 
\put(25,30){\vector(1,1){18}}
\put(25,30){\vector(1,-1){18}} 
\put(44,42){\framebox(30,12){$P_{X^{n}|M_{n}}$}}
\put(44,6){\framebox(30,12){$P_{Y^{n}|M_{n}}$}} 
\put(74,48){\vector(1,0){22}}
\put(74,12){\vector(1,0){22}} \put(0,33){%
\mbox{%
$M_{n}$%
}} \put(84,50){%
\mbox{%
$X^{n}$%
}} \put(84,14){%
\mbox{%
$Y^{n}$%
}
\put(-5, 16){$(X^n,Y^n)\sim P_{X^nY^n}$}
} \end{picture}}
\caption{The distributed source simulation problem}
\label{fig:source_sim}
\end{figure}

We now turn to formal definitions of the problem. Consider the distributed source simulation setup depicted
in Fig.~\ref{fig:source_sim}. Each of the two terminals has access to a uniformly distributed random variable~$M_n$, also known as the {\em common} or {\em shared randomness}. Given a target distribution
$\pi_{XY}$, one of terminals uses $M_n \in \calM_n$ and its own local randomness
to generate a random vector $X^n$ and the other one uses~$M_n$ and its own
local randomness to generate another random vector~$Y^n$. The terminals' goal is to ensure that  the {\em synthesized distribution}
\begin{equation}
P_{X^nY^n}(x^n,y^n)=\frac{1}{|\calM_n|} \sum_{m\in\calM_n} P_{X^n|M_n}(x^n|m)P_{Y^n|M_n}(y^n|m) \label{eqn:syn_dist}
\end{equation}
is ``close to'' the $n$-fold product of the target distribution $\pi_{XY}^n$. We wish to quantify the minimum  amount of common randomness---that is the cardinality $|\calM_n|$ or its normalized logarithm $\frac{1}{n}\log|\calM_n|$---satisfying this requirement. Of course, we have to quantify what we mean by ``close to''. In Wyner's original paper, this discrepancy between $P_{X^nY^n}$ and $\pi_{XY}^n$ was quantified via the  {\em normalized relative entropy}
\begin{equation}
\frac{1}{n} D\left(P_{X^nY^n} \middle\| \pi_{XY}^n\right)=\frac{1}{n}\sum_{x^n,y^n}P_{X^nY^n}(x^n,y^n)\log\frac{P_{X^nY^n}(x^n,y^n)}{\pi_{XY}^n(x^n,y^n)}.  \label{eqn:norm_re}
\end{equation}

\begin{definition} \label{def:wyner_code}
An {\em $(n,R)$-fixed-length distributed source simulation code} consists of a pair of random mappings  called {\em processors}  $P_{X^n | M_n} \in\calP( \calX^n |\calM_n)$ and  $P_{Y^n | M_n} \in\calP( \calY^n |\calM_n)$ such that $\log |\calM_n|\le nR$. 
\end{definition}
In the above definition, $n$ and $R$ are known respectively as the {\em blocklength} and the {\em rate} of the code $(P_{X^n | M_n},P_{Y^n | M_n})$. We are now ready to define Wyner's common information from the distributed source simulation perspective. 
\begin{definition} \label{def:wyner_CI}
The  {\em  minimal distributed simulation rate} $T(\pi_{XY})$ between a pair of random variables $(X,Y)\sim \pi_{XY}$ is the infimum of all rates $R$ such that there exists a sequence of $(n,R)$-fixed-length  distributed source simulation  codes $\{ (P_{X^n | M_n},P_{Y^n | M_n})\}_{n=1}^\infty$ satisfying 
\begin{equation}
\lim_{n\to\infty} \frac{1}{n} D\left(P_{X^nY^n} \middle\| \pi_{XY}^n\right)=0, \label{eqn:norm_re2}
\end{equation}
where $P_{X^nY^n}$ denotes the synthesized distribution in \eqref{eqn:syn_dist}. 
\end{definition}
At this point, the  reader may wonder whether the minimal distributed simulation rate  $T(\pi_{XY})$ as defined in Definition~\ref{def:wyner_CI} is ``sensitive'' to the choice of the discrepancy measure---namely, that it is the normalized relative entropy in \eqref{eqn:norm_re2}. We reassure the reader that this will be discussed extensively in the sequel---as a matter of fact, this is a central  theme in Part~\ref{part:two} of the monograph. Just to provide a sneak peek at the results in the subsequent sections, we mention the $T(\pi_{XY})$ remains unchanged if we choose not to normalize by $n$ in~\eqref{eqn:norm_re2}; this results in a {\em more stringent} criterion. Furthermore, $T(\pi_{XY})$ also remains the same if the normalized relative entropy is replaced by the TV distance $|P_{X^nY^n} - \pi_{XY}^n|$. More importantly, we discuss the ramifications of changing the discrepancy measure to various members of the family of normalized and unnormalized R\'enyi divergences; these have implications for other notions of common information such as the exact common information. 

One of Wyner's key contributions in his seminal paper on common information~\cite{WynerCI} is the following. 

\begin{theorem}\label{thm:wynerCI_sim}
The  minimal distributed simulation rate is given by 
\begin{equation}
T(\pi_{XY}) = C_{\Wyner}(\pi_{XY})=\min_{P_{W}P_{X|W}P_{Y|W}: P_{XY}=\pi_{XY}} I_P(XY;W).  \label{eqn:wyner_sim}
\end{equation}
\end{theorem}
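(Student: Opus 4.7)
My plan is to establish both inequalities. For achievability ($T(\pi_{XY}) \leq C_{\Wyner}(\pi_{XY})$), I fix an admissible joint distribution $P_W P_{X|W} P_{Y|W}$ (whose $(X,Y)$-marginal equals $\pi_{XY}$) and any rate $R > I_P(W;XY)$. Both processors share a random codebook $\scC = \{W^n(m) : m \in [2^{\lfloor nR \rfloor}]\}$ of i.i.d.\ $P_W^n$-codewords. On input $M_n$, processor one samples $X^n \sim P_{X|W}^n(\cdot \mid W^n(M_n))$ and processor two \emph{independently} samples $Y^n \sim P_{Y|W}^n(\cdot \mid W^n(M_n))$. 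The induced $P_{X^n Y^n}$ is a mixture of $2^{\lfloor nR \rfloor}$ product distributions indexed by the codeword, and a soft-covering (resolvability) argument---essentially Wyner's estimate for the normalized relative entropy between a mixture of product distributions and a product target---shows that $\bbE_{\scC}\bigl[D\bigl(P_{X^n Y^n \mid \scC} \big\| \pi_{XY}^n\bigr)\bigr] = o(n)$ whenever $R > I_P(W;XY)$. A standard averaging argument then yields a deterministic codebook achieving $n^{-1} D(P_{X^n Y^n} \| \pi_{XY}^n) \to 0$. Taking $R \downarrow I_P(W;XY)$ and optimizing over admissible joints gives $T(\pi_{XY}) \leq C_{\Wyner}(\pi_{XY})$.

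For the converse, let $(P_{X^n|M_n}, P_{Y^n|M_n})$ be any rate-$R$ code with $n^{-1} D(P_{X^n Y^n} \| \pi_{XY}^n) \to 0$. I begin with
\[
nR \;\geq\; H(M_n) \;\geq\; I(M_n; X^n Y^n) \;=\; \sum_{i=1}^n I(M_n; X_i Y_i \mid X^{i-1} Y^{i-1}),
\]
and rewrite each summand as $I(W_i; X_i Y_i) - I(X^{i-1} Y^{i-1}; X_i Y_i)$ where $W_i := (M_n, X^{i-1}, Y^{i-1})$. Crucially, the code's conditional independence $P_{X^n Y^n \mid M_n} = P_{X^n \mid M_n} P_{Y^n \mid M_n}$ directly implies $X_i - W_i - Y_i$ for every $i$. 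Introducing an independent time-sharing index $J \sim \Unif([n])$ and setting $W := (W_J, J)$, $\bar X := X_J$, $\bar Y := Y_J$, the chain $\bar X - W - \bar Y$ holds, and averaging over $J$ yields
\[
R \;\geq\; I(W; \bar X \bar Y) \;-\; \frac{1}{n}\sum_{i=1}^n I(X^{i-1} Y^{i-1}; X_i Y_i).
\]

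It remains to take $n \to \infty$. By Pinsker's inequality applied to the hypothesis $n^{-1} D(P_{X^n Y^n} \| \pi_{XY}^n) \to 0$, the joint $P_{X^n Y^n}$ approaches $\pi_{XY}^n$ in TV, so each marginal $P_{X_i Y_i}$ approaches $\pi_{XY}$; together with the identity $\sum_i I(X^{i-1} Y^{i-1}; X_i Y_i) = \sum_i H(X_i Y_i) - H(X^n Y^n)$ and the elementary bound $|n^{-1} H(X^n Y^n) - H_\pi(XY)| \leq 2\, n^{-1} D(P_{X^nY^n} \| \pi_{XY}^n)$, the ``independence defect'' vanishes. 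The mixture $P_{\bar X \bar Y}$ also tends to $\pi_{XY}$. Using the standard cardinality bound $|\calW| \leq |\calX||\calY| + 2$ on the auxiliary alphabet, the feasible set in \eqref{eqn:wyner_sim} is compact and $I_P(W;XY)$ is continuous on it, so a subsequential limit of $P_{W \bar X \bar Y}$ is a feasible Wyner triple along which $\liminf_n I(W; \bar X \bar Y) \geq C_{\Wyner}(\pi_{XY})$. Therefore $R \geq C_{\Wyner}(\pi_{XY})$.

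The principal technical hurdle is the achievability's soft-covering bound in (normalized) relative entropy: log-likelihood ratios are unbounded, so one must either truncate to conditionally typical sequences (as in Wyner's original proof) or pass to a R\'enyi-divergence surrogate of order $1+s$ for small $s > 0$, anticipating the cleaner approach that appears in later sections. The converse, by contrast, is a routine single-letterization once the Markov structure induced by the code and the near-i.i.d.\ correction term are properly identified.
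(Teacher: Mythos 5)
Your achievability argument is exactly the paper's: a random i.i.d.\ $P_W^n$ codebook shared by the two processors, soft-covering (resolvability) to drive the normalized relative entropy to zero whenever $R > I_P(W;XY)$, then derandomization and optimization over Wyner triples. No issues there.

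Your converse, however, takes a genuinely different route from the one the paper sketches (in the proof of Theorem~\ref{thm:wyner_cts}, which it says recovers this converse). The paper writes
$I(X^nY^n; M_n) = D(P_{X^nY^nM_n}\|\pi_{XY}^n P_{M_n}) - D(P_{X^nY^n}\|\pi_{XY}^n)$,
applies the chain rule for relative entropy to the first term, and then uses \emph{convexity} to drop the conditioning on $X^{i-1}Y^{i-1}$, ending up with the compact auxiliary $W=(M_n,J)$ and a lower bound $D(P_{XY|W}\|\pi_{XY}|P_W)$. You instead apply the mutual-information chain rule $I(M_n;X^nY^n)=\sum_i I(M_n;X_iY_i|X^{i-1}Y^{i-1})$, keep the past in the auxiliary $W_i=(M_n,X^{i-1},Y^{i-1})$, and explicitly subtract an ``independence defect'' $\frac1n\sum_i I(X^{i-1}Y^{i-1};X_iY_i)$. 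Both identifications give a valid Markov chain $X-W-Y$ (yours factorizes as $P_{X_i|M_nX^{i-1}}P_{Y_i|M_nY^{i-1}}$, which is indeed a product given $W_i$), so the routes are both sound; the paper's version is tidier because the correction term appears immediately as $D(P_{X^nY^n}\|\pi_{XY}^n)$ rather than needing to be identified and bounded after the fact.

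There is one genuine flaw in your handling of the correction term. The ``elementary bound'' you invoke, $|n^{-1}H(X^nY^n)-H_\pi(XY)|\le 2\,n^{-1}D(P_{X^nY^n}\|\pi_{XY}^n)$, is not true in general: writing $D(P\|\pi^n)=-H(P)-\bbE_P[\log\pi^n]$, one gets
$|n^{-1}H(P)-H(\pi)| \le n^{-1}D(P\|\pi^n) + c\cdot|P-\pi^n|$ with $c=\max_{x,y}|\log\pi_{XY}(x,y)|$,
and Pinsker only gives $|P-\pi^n|=O(\sqrt{D(P\|\pi^n)})$, which need not be $O(n^{-1}D)$. Fortunately you do not need that bound at all. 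The identity
\begin{equation}
D(P_{X^nY^n}\|\pi_{XY}^n)\;=\;\sum_{i=1}^n I(X^{i-1}Y^{i-1};X_iY_i)\;+\;\sum_{i=1}^n D(P_{X_iY_i}\|\pi_{XY})
\end{equation}
(which follows by expanding $-\bbE_P[\log\pi_{XY}^n]=\sum_i[H(P_{X_iY_i})+D(P_{X_iY_i}\|\pi_{XY})]$) immediately gives $\frac1n\sum_i I(X^{i-1}Y^{i-1};X_iY_i)\le \frac1n D(P_{X^nY^n}\|\pi_{XY}^n)\to 0$, killing your correction term in one line, and also gives the convexity-type bound $D(P_{\bar X\bar Y}\|\pi_{XY})\le\frac1n\sum_i D(P_{X_iY_i}\|\pi_{XY})\to 0$ that you need for the marginal to converge. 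With that repair, and a support-lemma cardinality reduction applied \emph{at each finite $n$} (your $W_i$ lives on an exponentially large alphabet, so the bound $|\calW|\le|\calX||\calY|+2$ must be invoked before passing to a subsequential limit, not after), the converse closes.
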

Thus, the minimal distributed simulation rate is exactly Wyner's common information as defined in \eqref{eqn:wyner_ci_intro}.  We reiterate that the minimization in~\eqref{eqn:wyner_sim} is performed over all triples of random variables $(W,X,Y)$ such that $X-W-Y$ forms a Markov chain in this order and the marginal distribution of $(X,Y)\sim P_{XY}$ is exactly the target distribution $\pi_{XY}$. The use of the $\min$ (in place of an $\inf$ as in~\eqref{eqn:wyner_ci_intro}) in~\eqref{eqn:wyner_sim} is justified by the fact that the cardinality of~$W$ can be restricted to be no more than $|\calX||\calY|$. This is a consequence of an application  of the convex cover method; see \citet[Appendix~C]{elgamal} for a detailed discussion. Having established the  equivalence between the minimal distributed simulation rate and Wyner's common information,  in the following, we will no longer distinguish between these two notions.

\begin{example} \label{ex:XYV_Wyner} Let us do a sanity check of the expression in~\eqref{eqn:wyner_sim}  based on our running example in which  $X=(\tilX,V)$ and $Y=(\tilY,V)$ and $\tilX$, $\tilY$ and~$V$ are mutually independent. By taking $W=V$, we see that $T(\pi_{XY})\le H(V)$. On the other hand, we have the Markov chain $V-X-W-Y-V$, so $V$ is a deterministic function of $W$. As a result, 
\begin{equation}
I(XY;W) = I(\tilX\tilY V; W) = I(\tilX\tilY V; WV)\ge H(V) .
\end{equation}
Since this holds true for all $X-W-Y$,  minimizing the left-hand side over all such joint distributions yields $T(\pi_{XY})\ge H(V)$ as desired.
So indeed, the formula in \eqref{eqn:wyner_sim} coincides with the intuitive expression for the common information of $X=(\tilX,V)$ and $Y=(\tilY,V)$, namely $H(V)$.
\end{example} 

Although we will not provide detailed proofs in this monograph, we briefly mention the main idea to prove the direct (or achievability) part of Theorem~\ref{thm:wynerCI_sim} as it is a prevailing theme in Part~\ref{part:two}. This is based on the following lemma, which, in today's information theory parlance, is known as   {\em approximation of output statistics}~\citep{HV93}, {\em channel resolvability}~\cite{Hayashi06, Hayashi11}, or {\em soft-covering}~\citep{cuff13}. We term  any subset  $\calC_n$ of $\calW^n$ as a {\em codebook}. Any codebook $\calC_n$ takes the form   $\{w^n(m):m\in \calM_n\}$. The elements of $\calC_n$, namely $w^n(m)$, are  called   {\em codewords}.

\begin{lemma}[Soft-Covering]\label{lem:soft-covering}
Let $(U,W)\sim P_{UW} \in\calP(\calU\times\calW)$ be a given pair of random variables with mutual information $I(U;W)$. For any $R> I(U;W)$, there exists a sequence of codebooks  $\{ \calC_n \}_{n=1}^\infty$ with 
\begin{equation}
\limsup_{n\to\infty}\frac{1}{n}\log|\calM_n|\le R
\end{equation}
such that the  corresponding sequence of synthesized distributions
\begin{equation}
 P_{U^n}(u^n) := \frac{1}{|\calM_n|} \sum_{m \in \calM_n}P_{U|W}^n   (u^n | w^n(m) ),\quad n\in\bbN  \label{eqn:syn_dis_sc}
\end{equation}
is arbitrarily close in the normalized relative entropy to the product distribution $P_U^n$, i.e., 
\begin{equation}
\lim_{n\to\infty}\frac{1}{n}D\big( P_{U^n} \big\| P_U^n \big)=0. \label{eqn:soft-covering-approx}
\end{equation}
In addition, the TV distance between $P_{U^n}$ and $P_U^n$   vanishes, i.e., 
\begin{equation}
\lim_{n\to\infty}\big| P_{U^n} -P_U^n\big| =0.\label{eqn:soft-covering-tv}
\end{equation}
\end{lemma}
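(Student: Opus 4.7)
The plan is to prove both statements simultaneously via a random coding argument and then extract a deterministic codebook by the probabilistic method. Fix any $R > I(U;W)$, set $|\calM_n| = \lceil 2^{nR} \rceil$, and draw codewords $W^n(1), \ldots, W^n(|\calM_n|)$ independently, each from the product measure $P_W^n$. Write $\calC_n$ for the random codebook and $P_{U^n|\calC_n}$ for the random synthesized distribution defined as in \eqref{eqn:syn_dis_sc}. If I can show both $\bbE_{\calC_n}\bigl[\tfrac{1}{n} D(P_{U^n|\calC_n} \| P_U^n)\bigr] \to 0$ and $\bbE_{\calC_n}\bigl[|P_{U^n|\calC_n} - P_U^n|\bigr] \to 0$, then Markov's inequality and a union bound yield a sequence of deterministic codebooks that achieve both \eqref{eqn:soft-covering-approx} and \eqref{eqn:soft-covering-tv}.

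For the total variation bound I would implement Wyner's classical typical-set scheme. Decompose $\calU^n$ into the weakly typical set $\calA_\eps^{(n)}(P_U)$ and its complement. For each typical $u^n$, classify the codeword indices according to whether $(u^n, W^n(m))$ is jointly typical: only the jointly typical ones contribute meaningfully, their expected proportion is $\approx 2^{-nI(U;W)}$, and each contributing summand is of magnitude $\approx 2^{-nH(U|W)}$. Because $R > I(U;W)$, the number of contributing codewords is of order $2^{n(R-I(U;W))}$, so a Chernoff bound produces doubly exponential concentration of $P_{U^n|\calC_n}(u^n)$ around $P_U^n(u^n)$. Averaging against $P_U^n$ and bounding the atypical contribution by the weak law of large numbers yields $\bbE_{\calC_n}[|P_{U^n|\calC_n} - P_U^n|] = \exp(-\Omega(n))$.

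For the relative-entropy bound, a direct second-moment computation is unstable because $\log(P_{U^n|\calC_n}/P_U^n)$ can be arbitrarily large on rare configurations. The cleanest workaround is a Rényi divergence detour: introduce the order-$(1{+}s)$ Rényi information $I_{1+s}(U;W) := D_{1+s}(P_{U|W} \| P_U \mid P_W)$, which is nondecreasing in $s$ and continuous at $s = 0$ with limit $I(U;W)$. Pick $s \in (0,1]$ small enough that $R > I_{1+s}(U;W)$. A one-shot Rényi soft-covering estimate (of the kind foreshadowed by the monograph's later Rényi chapters) applied to the $n$-fold product gives, upon exponentiation, $\bbE_{\calC_n}[\exp(s D_{1+s}(P_{U^n|\calC_n}\|P_U^n))] \le 1 + \exp(-ns(R - I_{1+s}(U;W)))$. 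Jensen's inequality and the monotonicity $D \le D_{1+s}$ then convert this into a bound $\bbE_{\calC_n}[D(P_{U^n|\calC_n} \| P_U^n)] = O(1)$, which, divided by $n$, gives the desired convergence.

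The principal obstacle is the Shannon-divergence estimate rather than the TV estimate: KL is sensitive to rare events in which a handful of codewords are pathologically atypical relative to some $u^n$, producing log-ratios that the typical-set truncation used for TV cannot control. The Rényi parameter $s$ acts as a tuning knob trading off moment-truncation against proximity to Shannon information, and its availability rests on the continuity of $I_{1+s}$ at $s = 0$ — automatic for finite alphabets and valid in the general setting under the mild absolute-continuity implicit in the finiteness of $I(U;W)$.
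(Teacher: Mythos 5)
Your proof is essentially correct for finite alphabets and is aligned with the way the paper handles this lemma, but with a few points worth calibrating. The paper does not prove Lemma~\ref{lem:soft-covering} directly; it cites the normalized-KL part to Wyner and the TV part to Han--Verd\'u, Hayashi, and Cuff. What the paper \emph{does} prove, in Section~\ref{sec:dist_ss_cts}, is the one-shot R\'enyi soft-covering lemma (Lemma~\ref{lem:one-shot-sc}), which is exactly the R\'enyi detour you reconstruct: apply the one-shot bound to the $n$-fold product, use additivity $D_{1+s}(P_{U|W}^n\|P_U^n|P_W^n)=nD_{1+s}(P_{U|W}\|P_U|P_W)$ and $D_{1+s}(P_U^n\|P_U^n)=0$, then Jensen plus the monotonicity $D\le D_{1+s}$. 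Your bookkeeping (that the conditional R\'enyi divergence already has the codebook expectation baked in, so the left side is $\bbE_{\scC_n}[\exp(sD_{1+s})]$) is right. One small upgrade you missed: this argument already gives $\bbE_{\scC_n}[D(P_{U^n|\scC_n}\|P_U^n)]\to 0$ \emph{unnormalized}, which is strictly stronger than~\eqref{eqn:soft-covering-approx} (this is exactly Remark~\ref{rmk:unnorm} in the paper). It also makes your separate typical-set/Chernoff argument for TV redundant: Pinsker's inequality~\eqref{eqn:pinsker} plus Cauchy--Schwarz turns the unnormalized KL bound into $\bbE_{\scC_n}[|P_{U^n|\scC_n}-P_U^n|]\to 0$ directly, so a single R\'enyi computation plus one application of Markov to the sum $D(P_{U^n|\scC_n}\|P_U^n)+|P_{U^n|\scC_n}-P_U^n|$ delivers both claims. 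The typical-set route is perfectly valid and historically primary (it is Cuff's and Hayashi's argument), but here it buys you nothing the KL bound does not already give, and it is strictly harder to make rigorous.

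The one genuine gap is in your closing sentence. You claim that the continuity $I_{1+s}(U;W)\downarrow I(U;W)$ as $s\downarrow 0$ is ``valid in the general setting under the mild absolute-continuity implicit in the finiteness of $I(U;W)$.'' That is false for general alphabets: finiteness of $I(U;W)$ does not imply finiteness of $I_{1+s}(U;W)$ for any $s>0$, nor does it imply $\lim_{s\downarrow 0}I_{1+s}(U;W)=I(U;W)$. (Take a joint density whose log-likelihood ratio is integrable but has no exponential moments.) For finite $\calU\times\calW$ everything is fine, which is the setting of this lemma as used in Section~\ref{sec:dist_sim}, so the gap is not fatal here --- but it is exactly the obstruction the paper is careful about in Proposition~\ref{prop:reg_wyner_cts}, which explicitly \emph{assumes} $D_{1+s}(P_{X|W}P_{Y|W}\|P_{XY}|P_W)<\infty$ for some $s>0$ rather than deducing it from finiteness of the Shannon quantity. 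You should restrict the R\'enyi detour to finite alphabets, or add the hypothesis that $D_{1+s_0}(P_{U|W}\|P_U|P_W)<\infty$ for some $s_0>0$.
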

We can interpret the soft-covering lemma by considering drawing a codeword $w^n$ from the codebook $\calC_n$ uniformly at random. This codeword  is then 
sent through $n$ uses of the {\em test channel} $P_{U|W}$. 
Lemma \ref{lem:soft-covering} says that as long as the cardinality of $\calC_n$ is large enough in the sense that its rate exceeds $I(U;W)$, the synthesized distribution $P_{U^n}$ can be made arbitrarily close to $P_U^n$ in the sense of~\eqref{eqn:soft-covering-approx} or \eqref{eqn:soft-covering-tv}; see Fig.~\ref{fig:soft_cover}. The statement in~\eqref{eqn:soft-covering-approx} is due to \citet{WynerCI} while that in \eqref{eqn:soft-covering-tv} is due to \citet{HV93}, \citet{Hayashi06} and \citet{cuff13}. The soft-covering lemma has found numerous applications in information-theoretic security. 

\begin{figure}
\centering
\begin{overpic}[width=1\columnwidth]{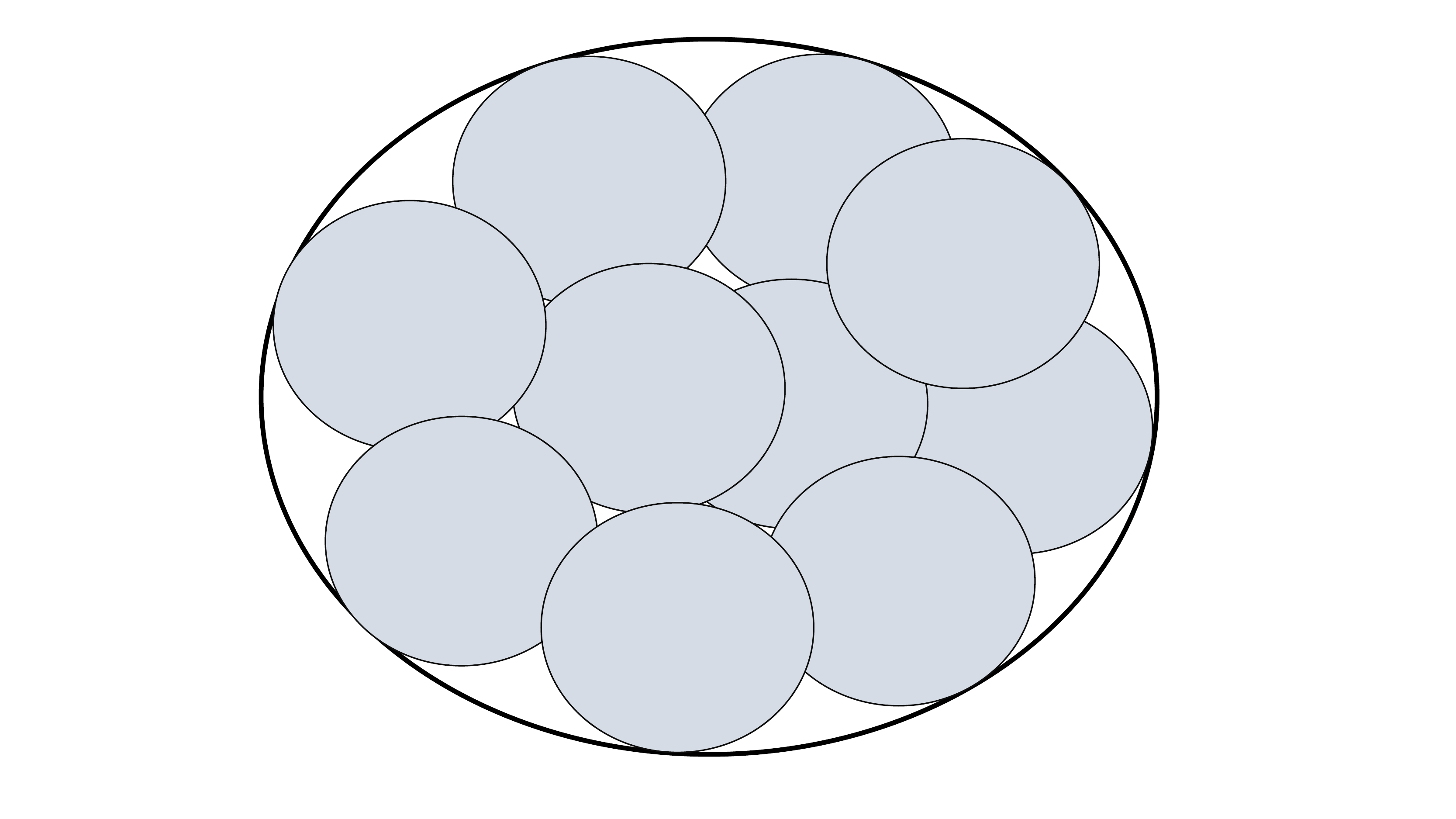}
\put(75,45){\mbox{$P_{U}^n$}}
{\footnotesize 
\put(57,37){\mbox{$P_{U|W}^n(\cdot|w^n(1))$}}
\put(38,12){\mbox{$P_{U|W}^n(\cdot|w^n(2))$}}
\put(32,44){\mbox{$P_{U|W}^n(\cdot|w^n(3))$}}
\put(19,33.5){\mbox{$P_{U|W}^n(\cdot|w^n(M)\!)$}}
}
\end{overpic}
\vspace{-.38in}
\caption{Illustration of the soft-covering lemma. If $M$ is large enough (i.e., its exponential rate exceeds $I(U;W)$), the uniform mixture of the conditional distributions $P_{U|W}^n( \cdot | w^n(m))$ approximates the product distribution $P_U^n$  well in the sense of the normalized relative entropy and the total variation distance.}
\label{fig:soft_cover}
\end{figure}

The application of the soft-covering lemma to prove the achievability part of Wyner's common information is now apparent. Particularize  $U\sim P_U$ in Lemma~\ref{lem:soft-covering} to be $(X,Y)\sim \pi_{XY}$ and since $X-W-Y$ forms a Markov chain, the synthesized distribution in~\eqref{eqn:syn_dis_sc}  reduces to that in~\eqref{eqn:syn_dist} by setting for each $u^n=(x^n,y^n)$ and $m\in\calM_n$,
\begin{equation}
P_{X^n|M_n}(x^n|m)P_{Y^n|M_n}(y^n|m) =P_{U^n|W^n} (u^n|w^n(m)).
\end{equation}

The converse is proved via single-letterization steps that are  commonplace in network information theory.  We omit them here as we will, in Section~\ref{sec:dist_ss_cts} and subsequent sections, sketch proofs that yield stronger and more general results, thus recovering the converse of Theorem \ref{thm:wynerCI_sim} ``for free''. See \cite[Section~5]{WynerCI} for the original converse proof. 

\begin{remark} \label{rmk:wyner_lp}
Wyner's common information  can be alternatively written as
\begin{align}
C_{\Wyner}(\pi_{XY})&=H_{\pi}(X,Y) + \min \bigg\{ \sum_{x\in\calX} \bbE_W \big[ P_{X|W}(x|W)\log P_{X|W}(x|W) \big]  \nn\\*
& \qquad + \sum_{y\in\calY} \bbE_W\big[ P_{Y|W}(y|W)\log P_{Y|W}(y|W) \big]    \bigg\}, \label{eqn:bilinear}
\end{align}
where the minimum extends over all pairs of  collections of  random variables $ \{ P_{X|W} (x|W )\}_{x\in\calX}$ and $ \{ P_{Y|W} (y|W) \}_{y\in\calY}$ satisfying
\begin{alignat}{3}
P_{X|W}(x|W)  , P_{Y|W}(y|W) &\ge 0 &&\quad\forall\, (x,y )\in\calX\times\calY\\
\sum_{x \in \calX} P_{X|W}(x|W) = \sum_{y\in\calY} P_{Y|W} (y|W) &= 1&&\quad\mbox{and} \label{eqn:linear_constr} \\
\bbE_W \big[ P_{X|W}(x|W)   P_{Y|W}(y|W) \big] &= \pi_{XY}(x,y)&&\quad \forall\, (x,y )\in\calX\times\calY.
\end{alignat}
In~\cite[Eqn.~(1.16)]{WynerCI}, Wyner  claims that $C_{\Wyner}(\pi_{XY})$ can be expressed as a max-min  of a difference of relative entropies. However, the authors have disproved this claim numerically. The problem with Wyner's argument is that one cannot swap the $\min$ and $\max$ operations because the Lagrangian corresponding to the minimization in~\eqref{eqn:bilinear} and constraints in~\eqref{eqn:linear_constr} is {\em bilinear}  in  $ \{ P_{X|W} (x|W ) \}_x$ and $\{  P_{Y|W} (y|W) \}_{y }$ and not  (jointly) linear in them. 
\end{remark}
\section{The Gray--Wyner System} \label{sec:wyner-GW}
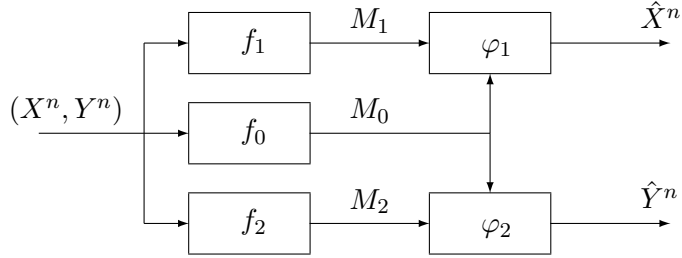
\begin{figure}[t]
\centering
\setlength{\unitlength}{0.4cm}
\scalebox{1}{
\begin{picture}(26,9)
\put(0,5.5){\makebox{$(X^n,Y^n)$}}
\put(6,1){\framebox(4,2)}
\put(6,4){\framebox(4,2)}
\put(6,7){\framebox(4,2)}
\put(7.7,1.8){\makebox{$f_2$}}
\put(7.7,4.8){\makebox{$f_0$}}
\put(7.7,7.8){\makebox{$f_1$}}
\put(1,5){\vector(1,0){5}}
\put(4.5,5){\line(0,1){3}}
\put(4.5,8){\vector(1,0){1.5}}
\put(4.5,5){\line(0,-1){3}}
\put(4.5,2){\vector(1,0){1.5}}
\put(14,1){\framebox(4,2)}
\put(14,7){\framebox(4,2)}
\put(15.7,7.7){\makebox{$\varphi_1$}}
\put(15.7,1.7){\makebox{$\varphi_2$}}
\put(10,2){\vector(1,0){4}}
\put(12,2.75){\makebox(0,0){$M_2$}}
\put(10,5){\line(1,0){6}}
\put(12,5.75){\makebox(0,0){$M_0$}}
\put(16,5){\vector(0,1){2}}
\put(16,5){\vector(0,-1){2}}
\put(10,8){\vector(1,0){4}}
\put(12,8.75){\makebox(0,0){$M_1$}}
\put(18,2){\vector(1,0){4}}
\put(21,2.5){\makebox{$\hat{Y}^n$}}
\put(18,8){\vector(1,0){4}}
\put(21,8.5){\makebox{$\hat{X}^n$}}
\end{picture}}
\caption{The Gray--Wyner source coding problem~\citep{GrayWyner}.}
\label{fig:gw}
\end{figure}

In addition to the Wyner's common information being interpreted as the minimum rate required to simulate a joint source $\pi_{XY}$ in a distributed manner, there is another natural interpretation in terms of a distributed lossless source coding system---the Gray--Wyner system \cite{GrayWyner} as depicted in Fig.~\ref{fig:gw}. In this problem, there is a joint source $(X,Y)\sim \pi_{XY}$ that is to be reconstructed almost losslessly. This joint source is encoded into three bit strings $(M_0,M_1,M_2)$   of rates $(R_0,R_1,R_2)$ via three encoders that observe $n$ independent copies of $(X,Y)$. There are two decoders. Bit strings $M_0$ and $M_1$ are sent to the first decoder, while bit strings $M_0$ and $M_2$ are sent to the second decoder.  The two decoders generate estimates $\hatX^n$ and $\hatY^n$ of $X^n$ and~$Y^n$ respectively.

In distributed lossless source coding problems, one is concerned with the tradeoff among the rates; in this case, $(R_0, R_1, R_2)$. If the three encoders are combined into a single entity---equivalently, the common rate is allowed to be arbitrarily large---by Shannon's lossless source coding theorem, we can describe the joint source  using roughly $nH(XY)$ bits, or at a rate of $H(XY)$. Clearly, we can do more to reduce the common rate. Using our running example in which $X=(\tilX,V)$ and $Y=(\tilY, V)$ with $\tilX, \tilY$, and $V$ being independent, any coding scheme involves compressing the common part of $X$ and $Y$ using encoder $f_0$. For lossless reconstruction, this requires a rate of roughly $H(V)$. The other encoders $f_1$ and $f_2$ are tasked with compressing the private parts of the sources, namely $\tilX$ and $\tilY$ respectively. These require rates of roughly $H(\tilX)$ and $H(\tilY)$. Reconstruction of the sources by the decoders is clearly possible. For example, $\varphi_1$ takes the descriptions $(M_0,M_1)$ and reconstructs $V^n$ and $\tilX^n$, which when concatenated, is approximately $X^n$.  Thus, the required sum rate is $H(V)+ H(\tilX)+ H(\tilY ) = H(XY)$. Motivated by this special case, it seems natural to alternatively define the common information of the any source $(X,Y)\sim\pi_{XY}$ as the minimum common rate $R_0$ such that the sum rate $R_0+R_1+R_2$ is no larger than  the joint entropy $H(XY)$. The set of all $(R_0, R_1, R_2)$ such that $R_0+R_1+R_2=H(XY)$ is   known as the {\em Pangloss plane} of the source. The term ``Pangloss plane'' was  coined by \citet{GrayWyner}.

\begin{definition}
An {\em $(n,R_0, R_1, R_2)$-Gray--Wyner code} consists of 
\begin{itemize}
\item Three encoders $f_i:\calX^n\times\calY^n\to [2^{nR_i}]$ where $i = 0, 1,2 $; 
\item Two decoders $\varphi_1 \!:\! [2^{nR_0}]\!\times\! [2^{nR_1}]\!\to\! \calX^n$ and $\varphi_2 \!:\! [2^{nR_0}]\!\times\! [2^{nR_2}]\!\to \!\calY^n$. 
\end{itemize}
The {\em probability of error} of the code is 
\begin{equation}
\Pr\big( \big(\varphi_1 (M_0,  M_1) , \varphi_2 ( M_0 ,  M_2) \big) \ne (X^n,Y^n) \big),\label{eqn:prob_error}
\end{equation}
where $M_i = f_i(X^n, Y^n)$ for $i = 0,1,2$. 
\end{definition}

\begin{definition} \label{def:gw_R0}
The {\em Pangloss-common information based on the Gray--Wyner system} $T_{\mathrm{GW}}(\pi_{XY})$ between two random variables $(X,Y)\sim\pi_{XY}$ is the infimum of all  $R_0$ such that for all $\epsilon>0$, there exists a sequence of  $(n,R_0, R_1, R_2)$-Gray--Wyner codes $\{ (f_{0,n} , f_{1,n} , f_{2,n}, \varphi_{1,n}, \varphi_{2, n} )\}_{n=1}^\infty$  such that $R_0+R_1+R_2\le H(XY)+\epsilon$ for all $n$ sufficiently large and the probability of error in \eqref{eqn:prob_error} vanishes as the length of the code $n$ tends to infinity. 
\end{definition}
The term ``Pangloss''  is used in the above definition to emphasize that sum rate should be close $H(XY)$; this is to distinguish this definition from an analogous one for the GKW common information (Definition~\ref{def:gw_R0_max2}).  We also adopt the somewhat verbose qualifier ``based on the Gray--Wyner system'' and the subscript $\mathrm{GW}$ in $T_{\mathrm{GW}}(\pi_{XY})$ because {\em a priori},  there is little evidence to suggest that $T_{\mathrm{GW}}(\pi_{XY})$  equals to the quantity in Definition~\ref{def:wyner_CI}. The qualifier can, however, be jettisoned in view of the following theorem also due to \citet{WynerCI}.
\begin{theorem} \label{thm:gw_wyner}
The Pangloss-common information based on the Gray--Wyner system 
\begin{equation}
T_{\mathrm{GW}}(\pi_{XY})=C_{\Wyner}(\pi_{XY})  . \label{eqn:wyner_gw2}
\end{equation}
\end{theorem}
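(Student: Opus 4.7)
My plan is to prove the two inequalities $T_{\mathrm{GW}}(\pi_{XY}) \le C_{\Wyner}(\pi_{XY})$ and its reverse separately, in each case exploiting the Markov-chain structure of $X-W-Y$ that appears in Wyner's characterization. The heart of both directions is the single-letter characterization of the full Gray--Wyner rate region due to Gray and Wyner~\cite{GrayWyner}: a triple $(R_0, R_1, R_2)$ is admissible for lossless reconstruction if and only if there exists an auxiliary $W$ (with $|\calW|$ bounded) jointly distributed with $(X,Y)$ such that $R_0 \ge I(XY;W)$, $R_1 \ge H(X|W)$, and $R_2 \ge H(Y|W)$. Once this region is in hand, the Pangloss constraint $R_0 + R_1 + R_2 \le H(XY)$ combined with the lower bounds on $R_1$ and $R_2$ forces $I(XY;W) + H(X|W) + H(Y|W) \le H(XY)$. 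Since $H(X|W) + H(Y|W) \ge H(XY|W) = H(XY) - I(XY;W)$ with equality if and only if $X-W-Y$ is Markov, the Pangloss constraint is equivalent to this Markov condition, and minimizing $R_0$ subject to it yields exactly $C_{\Wyner}(\pi_{XY})$ as defined in \eqref{eqn:wyner_ci_intro}.

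For achievability, I would fix an $\epsilon > 0$, choose a near-optimizer $(P_W, P_{X|W}, P_{Y|W})$ achieving $I(XY;W) \le C_{\Wyner}(\pi_{XY}) + \epsilon$, and construct an $(n, R_0, R_1, R_2)$-code of rates $R_0 = I(XY;W) + \epsilon$, $R_1 = H(X|W) + \epsilon$, $R_2 = H(Y|W) + \epsilon$ as follows. Generate an i.i.d.\ codebook $\{w^n(m)\}_{m \in [2^{nR_0}]}$ under $P_W^n$; encoder $f_0$ uses joint-typicality covering (a corollary of the soft-covering lemma, Lemma~\ref{lem:soft-covering}) to pick an index $M_0$ with $(w^n(M_0), X^n, Y^n) \in \calT_\epsilon^{(n)}(P_{WXY})$. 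Conditioned on $w^n(M_0)$, encoders $f_1$ and $f_2$ then losslessly describe $X^n$ and $Y^n$ via conditional Slepian--Wolf--style coding at rates $H(X|W) + \epsilon$ and $H(Y|W) + \epsilon$ respectively; decoders $\varphi_1, \varphi_2$ first recover $w^n(M_0)$ and then invert the conditional descriptions. The Markov property $X-W-Y$ makes $H(X|W) + H(Y|W) = H(XY|W)$, so the sum rate telescopes to $H(XY) + 3\epsilon$, satisfying the Pangloss constraint.

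For the converse, I would start from a sequence of codes with $\Pen \to 0$ and $R_0 + R_1 + R_2 \le H(XY) + \epsilon_n$, $\epsilon_n \downarrow 0$. Fano's inequality immediately yields $H(X^n|M_0 M_1), H(Y^n|M_0 M_2) \le n\delta_n$ with $\delta_n \downarrow 0$. Introducing a time-sharing variable $J \sim \Unif[n]$ independent of everything and setting $W := (M_0, X^{J-1}, Y^{J-1}, J)$, $X := X_J$, $Y := Y_J$ (so that $(X,Y) \sim \pi_{XY}$ by the i.i.d.\ source), standard chain-rule single-letterization of $nR_0 \ge H(M_0) \ge I(X^n Y^n; M_0) = \sum_{i=1}^n I(X_i Y_i; M_0, X^{i-1}, Y^{i-1})$ then delivers $R_0 \ge I(XY;W) - o(1)$. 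The main obstacle---and what distinguishes this converse from a generic network information theory argument---is showing that the asymptotic Markov chain $X - W - Y$ is forced by the tight Pangloss constraint, since the auxiliary $W$ built above does not automatically satisfy $I(X;Y|W) = 0$. To handle this, I would combine the Pangloss bound with the Fano estimates via
\begin{align}
n(R_0+R_1+R_2) \ge H(M_0 M_1 M_2) \ge H(M_0) + H(X^n Y^n | M_0) - 2n\delta_n,
\end{align}
which yields $H(M_0 | X^n Y^n) \le n(\epsilon_n + 2\delta_n)$; together with $H(X^n|M_0) \le n(R_1+\delta_n)$ and $H(Y^n|M_0) \le n(R_2+\delta_n)$ (again from Fano), the identity $H(X^n|M_0) + H(Y^n|M_0) - H(X^n Y^n|M_0) = I(X^n;Y^n|M_0)$ then forces $I(X^n;Y^n|M_0) \le n(\epsilon_n + 2\delta_n)$, which single-letterizes through the time-sharing trick to $I(X;Y|W) = o(1)$. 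A standard compactness/continuity argument on the finite-alphabet simplex of conditional distributions extracts a limiting $W^*$ with $X - W^* - Y$ exactly, so that passing to the liminf gives $\liminf_n R_0 \ge I(XY;W^*) \ge C_{\Wyner}(\pi_{XY})$, completing the converse.
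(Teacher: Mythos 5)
Your proposal is correct, and the essential argument is contained in your first paragraph. The paper does not actually present a proof of Theorem~\ref{thm:gw_wyner} (it is attributed to Wyner's 1975 paper), but the proof given for the directly analogous GKW result in Section~\ref{sec:GW_system} uses precisely your first-paragraph strategy: invoke the Gray--Wyner single-letter rate region and then observe that, with $R_0 \ge I(XY;W)$, $R_1 \ge H(X|W)$, $R_2 \ge H(Y|W)$, the Pangloss constraint $R_0+R_1+R_2 \le H(XY)+\epsilon$ forces $I(X;Y|W) \le \epsilon$ via the identity $I(XY;W) + H(X|W) + H(Y|W) = H(XY) + I(X;Y|W)$. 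Passing to the limit and using compactness of the simplex of auxiliary distributions then collapses to the Markov constraint and gives $C_{\Wyner}(\pi_{XY})$.

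Your second and third paragraphs are correct as stated, but they re-derive the Gray--Wyner region (covering-based achievability and a Fano/chain-rule converse) from scratch, which is redundant once you have cited that region result—the ``main obstacle'' you flag in the converse is already handled by the first-paragraph algebra applied to the auxiliary $W$ supplied by the region characterization. Two small terminology points: the step that selects a codeword $w^n(M_0)$ jointly typical with $(X^n,Y^n)$ is the (mutual) covering lemma rather than a corollary of the soft-covering lemma in the paper's sense, and the conditional lossless compression of $X^n$ given $w^n(M_0)$ available to both encoder and decoder is ordinary conditional source coding rather than Slepian--Wolf. Neither affects the validity of the argument.
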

Thus, both definitions of the common information (in Definitions~\ref{def:wyner_CI} and~\ref{def:gw_R0}) coincide and we can use a single symbol $C_{\Wyner}(\pi_{XY})$ to name the quantity  on the right-hand side of \eqref{eqn:wyner_gw2}. The subscript $\Wyner$ refers to \underline{W}yner.
The quantity $C_{\Wyner}(\pi_{XY}) $ thus  has two operational interpretations; one as the minimum rate required to simulate a joint source in a distributed manner and another as the minimum common rate of the Gray--Wyner system keeping the sum rate at the joint entropy of $\pi_{XY}$.



\section{Doubly Symmetric Binary Sources} \label{sec:dsbs}
Due to the optimization over the Markov chain $X-W-Y$, Wyner's common information is difficult to evaluate for most pairs of sources $\pi_{XY}$. Two notable exceptions are  the doubly symmetric binary source (DSBS) and the symmetric binary erasure source (SBES). We describe the former in this section and the latter in the next. 

Consider a DSBS $(X,Y) \in\{0,1\}^2$ which is defined by  the joint distribution
\begin{align}
\pi_{XY}=\begin{bmatrix}
\alpha &\beta \\ \beta & \alpha
\end{bmatrix}\label{eqn:dsbs} ,
\end{align}
where $\alpha=(1-p)/{2}$, $\beta= {p}/{2}$ and $p\in (0,1/2)$. This is equivalent to $X\sim\Bern(1/2)$ and $Y=X\oplus E$ with $E\sim\Bern(p)$ and independent of~$X$. Here, $p$ represents the crossover probability of a binary symmetric channel (BSC) with $X$ being the input and $Y$ the output. Intuitively, if $p \downarrow 0$, $X$ and $Y$ become highly correlated, and the common information increases. On the other hand, if $p \uparrow 1/2$, $X$ and $Y$ become close to independent and the common information decreases to $0$.

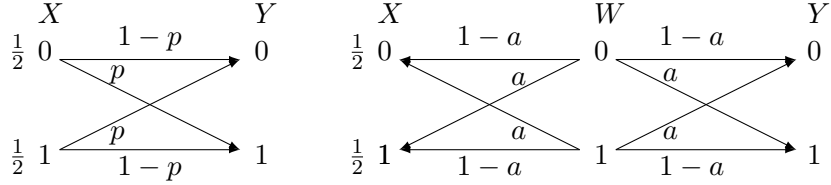
\begin{figure}
\centering \setlength{\unitlength}{0.048cm} 
\begin{tabular}{cc}
{ 
\begin{picture}(64,49) 
\put(10,35){\vector(1,0){50}} 
\put(10,10){\vector(1,0){50}} 
\put(10,10){\vector(2,1){50}} 
\put(10,35){\vector(2,-1){50}} 
\put(26,3){\mbox{$1-p$}} 
\put(26,39){\mbox{$1-p$}} 
\put(24,30){\mbox{$p$}} 
\put(24,13){\mbox{$p$}} 
\put(4,35){\mbox{$0$}} 
\put(-4,35){\mbox{$\frac{1}{2}$}} 
\put(4,6){\mbox{$1$}} 
\put(-4,6){\mbox{$\frac{1}{2}$}} 
\put(64,35){\mbox{$0$}} 
\put(64,6){\mbox{$1$}} 
\put(4,45){\mbox{$X$}} 
\put(64,45){\mbox{$Y$}}  
\end{picture}
} \hspace{.25in} &
{
\begin{picture}(130,49) 
\put(-4,35){\mbox{$\frac{1}{2}$}} 
\put(4,6){\mbox{$1$}} 
\put(-4,6){\mbox{$\frac{1}{2}$}} 
\put(60,35){\vector(-1,0){50}} 
\put(60,10){\vector(-1,0){50}} 
\put(60,10){\vector(-2,1){50}} 
\put(60,35){\vector(-2,-1){50}} 
\put(26,3){\mbox{$1-a$}} 
\put(26,39){\mbox{$1-a$}} 
\put(41,28){\mbox{$a$}} 
\put(41,13){\mbox{$a$}} 
\put(4,35){\mbox{$0$}} 
\put(4,6){\mbox{$1$}} 
\put(64,35){\mbox{$0$}} 
\put(64,6){\mbox{$1$}} 
\put(4,45){\mbox{$X$}} 
\put(64,45){\mbox{$W$}}  

\put(70,35){\vector(1,0){50}} 
\put(70,10){\vector(1,0){50}} 
\put(70,10){\vector(2,1){50}} 
\put(70,35){\vector(2,-1){50}} 
\put(82,3){\mbox{$1-a$}} 
\put(82,39){\mbox{$1-a$}} 
\put(83,29){\mbox{$a$}} 
\put(83,13){\mbox{$a$}} 
\put(123,35){\mbox{$0$}} 
\put(123,6){\mbox{$1$}} 
\put(123,45){\mbox{$Y$}} 
\end{picture}
}
\end{tabular}
\caption{Left: DSBS with crossover probability $p$. Right: Interpretation in terms of the common random variable $W$.}
\label{fig:dsbs}
\end{figure}

Equivalently,  $X=W\oplus A$ and $Y=W\oplus B$ with  $W\sim\Bern(1/2)$, $A,B\sim \Bern(a)$  mutually independent with $a:=(1-\sqrt{1-2p})/2\in (0,1/2)$ so $a\ast a =p$. Thus, similarly to $p$, as $a$ increases, the common information decreases. We can express $\alpha$ and $\beta$ in \eqref{eqn:dsbs} in terms of $a$ as $\alpha=\frac{1}{2}(a^2+(1-a)^2)$ and $\beta=a(1-a)$. In this parametrization, $W$ is the common random variable that achieves the minimum in the formula for Wyner's common information in~\eqref{eqn:wyner_sim}.   The two interpretations of the  DSBS are illustrated in Fig.~\ref{fig:dsbs}. Clearly, there is no loss in generality in restricting $p$ (or   $a$) to be in $(0,1/2)$; if not, replace $X$ by $X\oplus 1$.

\begin{figure}
\centering
\includegraphics[width = .93\columnwidth]{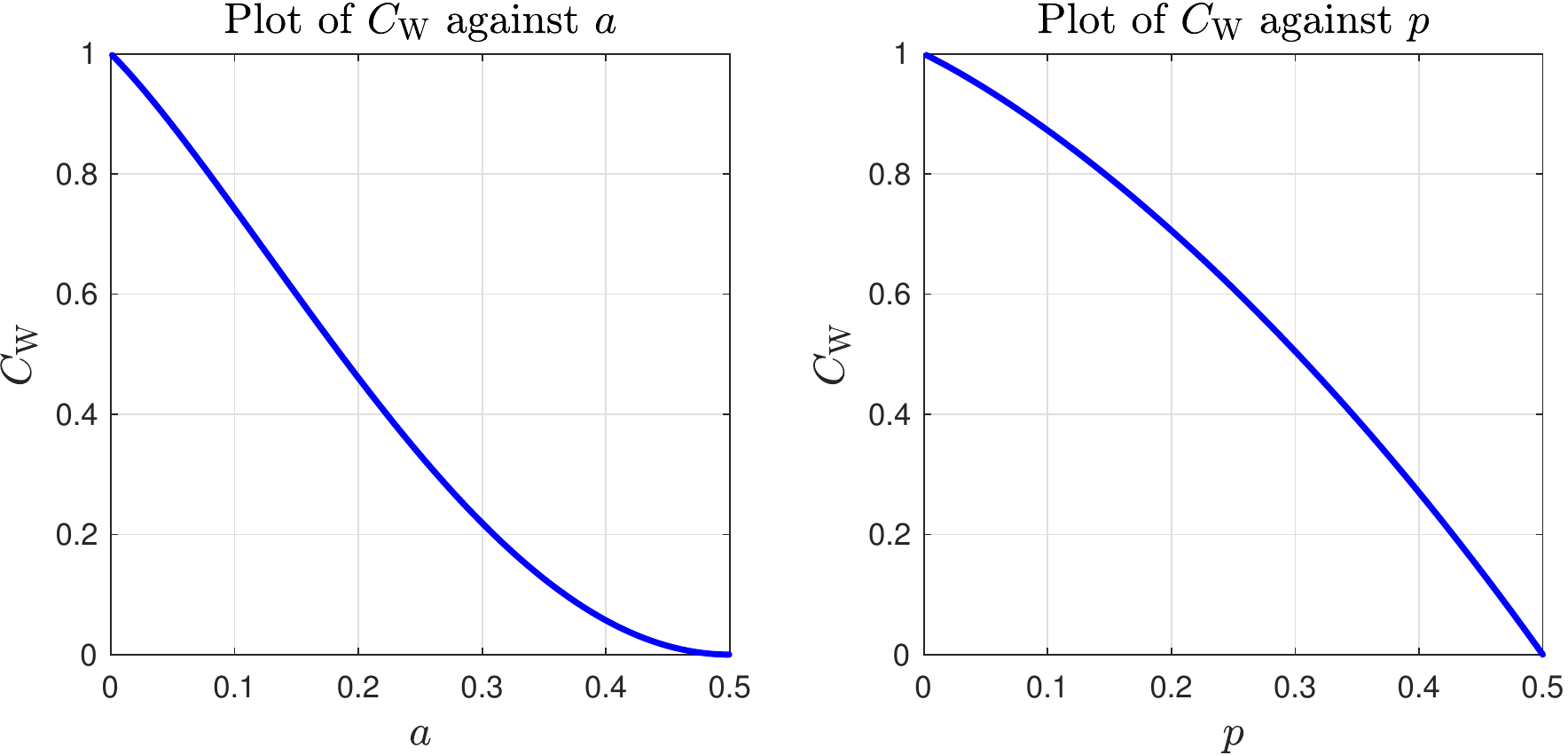}
\caption{Plots of Wyner's common information for the DSBS in terms of $p$ and $a$}
\label{fig:wyner_dsbs}
\end{figure}

\citet{WynerCI} successfully evaluated the common information for the DSBS in closed form.  
\begin{proposition} \label{prop:wyner_dsbs}
For the DSBS as described in \eqref{eqn:dsbs}, Wyner's common information is 
\begin{align}
 C_{\Wyner}(\pi_{XY})  =1+h\big(2a\bara\big)-2h(a)
\end{align}
where $h(a) := -a\log a- \bara\log \bara$ is the binary entropy function. 
\end{proposition}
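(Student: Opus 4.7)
Both bounds will be derived from the variational formula \eqref{eqn:wyner_sim}. For \emph{achievability}, I adopt the decomposition already anticipated in the discussion of Fig.~\ref{fig:dsbs}: let $W\sim\mathrm{Bern}(1/2)$ and set $X=W\oplus A$, $Y=W\oplus B$ with $A,B\sim\mathrm{Bern}(a)$ mutually independent and independent of $W$. Since $a\ast a=p$, the pair $(X,Y)$ has the prescribed DSBS marginal, and the Markov chain $X-W-Y$ holds by construction. As $A$ and $B$ are independent of $W$, $H(X|W)=H(Y|W)=h(a)$, and
\begin{equation}
I(XY;W)=H(X,Y)-H(X|W)-H(Y|W)=1+h(p)-2h(a),
\end{equation}
yielding $C_{\Wyner}(\pi_{XY})\le 1+h(2a\bar{a})-2h(a)$.

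For the \emph{converse}, fix any feasible $(W,X,Y)$ with $X-W-Y$ and $(X,Y)\sim\pi_{XY}$, so that $I(XY;W)=1+h(p)-H(X|W)-H(Y|W)$. It therefore suffices to prove the single-letter entropy inequality $H(X|W)+H(Y|W)\le 2h(a)$. The main tool is Mrs.~Gerber's Lemma of Wyner and Ziv: the function $\phi\colon[0,1]^2\to[0,1]$ defined by $\phi(u,v):=h\!\left(h^{-1}(u)\ast h^{-1}(v)\right)$, with $h^{-1}$ the inverse of $h$ restricted to $[0,1/2]$, is jointly convex. Writing $r(w):=\Pr(X=1|W=w)$ and $s(w):=\Pr(Y=1|W=w)$, conditional independence gives $(X\oplus Y)\mid W=w\sim\mathrm{Bern}(r(w)\ast s(w))$, so Jensen's inequality produces
\begin{equation}
H(X\oplus Y\mid W)=\bbE[h(r(W)\ast s(W))]\ge\phi\!\left(H(X|W),H(Y|W)\right).
\end{equation}
Since $X\oplus Y$ has marginal $\mathrm{Bern}(p)$, we also have $H(X\oplus Y\mid W)\le h(p)=h(a\ast a)$. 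Monotonicity of $h$ on $[0,1/2]$ then rearranges these inequalities into $h^{-1}(H(X|W))\ast h^{-1}(H(Y|W))\le a\ast a$.

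The remaining task is the scalar optimization: maximize $h(r)+h(s)$ over $(r,s)\in[0,1/2]^2$ subject to $r\ast s\le a\ast a$. I claim the maximum equals $2h(a)$, attained at $r=s=a$. On the interior of the active constraint, Lagrangian stationarity reduces to $\psi(r)=\psi(s)$ where $\psi(t):=(1-2t)\log((1-t)/t)$; a direct derivative computation shows $\psi$ is strictly decreasing on $(0,1/2)$, forcing $r=s$ and hence $r=s=a$. At boundary points where, say, $s=0$, the value is $h(a\ast a)$, which is at most $2h(a)$ by the subadditivity bound $h(a\ast a)=H(A\oplus B)\le H(A)+H(B)=2h(a)$ for independent $A,B\sim\mathrm{Bern}(a)$. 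Substituting $H(X|W)+H(Y|W)\le 2h(a)$ into the identity for $I(XY;W)$ yields the matching converse. The primary obstacle will be justifying the convexity of $\phi$---a classical but nontrivial Hessian calculation underlying Mrs.~Gerber's Lemma---together with verifying that the symmetric interior critical point of the final scalar problem is globally optimal; both are handled as sketched above.
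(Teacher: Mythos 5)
Your achievability half and the identity $I(XY;W)=1+h(p)-H(X|W)-H(Y|W)$ are correct, and your closing scalar optimization (that $h(r)+h(s)\le 2h(a)$ whenever $r\ast s\le a\ast a$) is both correct and is exactly the ingredient Wyner needs. But the bridge you build is broken: the function $\phi(u,v):=h\big(h^{-1}(u)\ast h^{-1}(v)\big)$ is \emph{not} jointly convex. Mrs.~Gerber's Lemma (Wyner--Ziv) gives convexity of $u\mapsto h\big(h^{-1}(u)\ast p\big)$ for a \emph{fixed} second argument $p$, i.e.\ convexity in each variable separately, and this does not upgrade to joint convexity. Indeed $\phi$ is strictly concave along the diagonal: $\phi(0,0)=0$, $\phi(1,1)=1$, yet $\phi(\tfrac12,\tfrac12)=h\big(2\,h^{-1}(\tfrac12)\,\overline{h^{-1}(\tfrac12)}\big)\approx h(0.196)\approx 0.71>\tfrac12$. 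Consequently the two-variable Jensen step $H(X\oplus Y\mid W)\ge\phi\big(H(X|W),H(Y|W)\big)$ is invalid and does fail: take $W$ with three values of probabilities $(0.1,0.8,0.1)$ and $r(w)=s(w)\in\{0.1,0.5,0.9\}$ (a DSBS with $p\approx 0.436$); then $\bbE[h(r\ast s)]\approx 0.936$ whereas $\phi(H(X|W),H(Y|W))=\phi(0.894,0.894)\approx 0.985$. (Your final inequality $H(X|W)+H(Y|W)\le 2h(a)$ still holds in this example, but the intermediate inequality you use to derive it does not.)

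Wyner's actual argument, the one the monograph alludes to when it mentions ``the concavity of $h$ and $\sqrt{\cdot}$,'' reverses your order of operations. Apply the scalar fact \emph{pointwise}: for each $w$, set $\gamma_w:=r(w)\ast s(w)$ and let $\theta_w:=\tfrac12\big(1-\sqrt{1-2\gamma_w}\big)\in[0,\tfrac12]$ be the solution of $\theta\ast\theta=\gamma_w$; your Lagrangian calculation gives $h(r(w))+h(s(w))\le 2h(\theta_w)$. Taking expectations, $H(X|W)+H(Y|W)\le 2\,\bbE[h(\theta_W)]$, and Jensen is now applied to the \emph{one-variable} map $\gamma\mapsto h\big(\tfrac12(1-\sqrt{1-2\gamma})\big)$, whose concavity one checks (this is where the concavity of $h$ and of $\sqrt{\cdot}$ enters). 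Since $\bbE[\gamma_W]=\Pr(X\ne Y)=p=a\ast a$, this yields $\bbE[h(\theta_W)]\le h(a)$, hence $H(X|W)+H(Y|W)\le 2h(a)$. The fix is therefore not cosmetic: you must apply the scalar bound before averaging over $W$, replacing the nonexistent two-variable convexity by the correct one-variable concavity.
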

This function is plotted in Fig.~\ref{fig:wyner_dsbs} and shows clearly that $C_{\Wyner}(\pi_{XY})$ for a DSBS  is decreasing in $p$ and $a$.  We conclude this section by  mentioning that \citet{Witsenhausen76} calculated $C_{\Wyner}(\pi_{XY})$ for a variety of other discrete sources. 

\section{Symmetric Binary Erasure Sources} \label{sec:sbes}
The SBES is a joint source $\pi_{XY}$  with binary input $\calX=\{0,1\}$ and ternary output $\calY= \{0,\rme, 1\}$.  The ``output'' $Y$ is identical to the ``input'' $X$ with probability $1-p$ and takes on the ``erasure symbol'' $\rme$ with probability $1-p$. The input variable is uniformly distributed on $\calX$, leading to the joint distribution
\begin{equation}
\pi_{XY} = \begin{bmatrix}
(1-p)/2 & p/2 & 0 \\
0 & p/2 & (1-p)/2
\end{bmatrix} . \label{eqn:sbes_jd}
\end{equation}
This is illustrated in the left diagram of Fig.~\ref{fig:sbes}. \citet{cuff13} proved the following proposition.

\begin{figure}
\centering \setlength{\unitlength}{0.05cm} 
{ \begin{picture}(60,60) \put(4,42){%
\mbox{%
$0$%
}} \put(4,6){%
\mbox{%
$1$%
}} \put(12,45){\vector(1,0){30}} \put(12,9){\vector(1,0){30}}
\put(12,45){\vector(2,-1){30}} \put(12,9){\vector(2,1){30}}
\put(44,42){%
\mbox{%
$0$%
}} \put(44,26){%
\mbox{%
$\rme$%
}} \put(44,6){%
\mbox{%
$1$%
}} \put(4,52){%
\mbox{%
$X$%
}} \put(44,52){%
\mbox{%
$Y$%
}} \put(24,32){%
\mbox{%
$p$%
}} \put(24,20){%
\mbox{%
$p$%
}} \put(16,48){%
\mbox{%
$1-p$%
}} \put(16,3){%
\mbox{%
$1-p$%
}} \put(-8,42){%
\mbox{%
$\frac{1}{2}$%
}} \put(-8,6){%
\mbox{%
$\frac{1}{2}$%
}} \end{picture}} \hspace{1cm} \centering \setlength{\unitlength}{0.05cm}
{ \begin{picture}(100,60) \put(4,42){%
\mbox{%
$0$%
}} \put(4,6){%
\mbox{%
$1$%
}} \put(12,45){\textcolor{red}{\vector(1,0){30}}} \put(12,9){\textcolor{red}{\vector(1,0){30}}}
\put(12,45){\vector(2,-1){30}} \put(12,9){\vector(2,1){30}}
\put(44,42){%
\mbox{%
$0$%
}} \put(44,26){%
\mbox{%
$\rme$%
}} \put(44,6){%
\mbox{%
$1$%
}} \put(52,45){\vector(1,0){30}} \put(52,9){\vector(1,0){30}}
\put(52,45){\vector(2,-1){30}} \put(52,9){\vector(2,1){30}}
\put(52,27){\textcolor{red}{\vector(1,0){30}}} \put(84,42){%
\mbox{%
$0$%
}} \put(84,26){%
\mbox{%
$\rme$%
}} \put(84,6){%
\mbox{%
$1$%
}} \put(4,52){%
\mbox{%
$X$%
}} \put(44,52){%
\mbox{%
$W$%
}} \put(84,52){%
\mbox{%
$Y$%
}} \put(24,32){%
\mbox{%
$p_{1}$%
}} \put(24,20){%
\mbox{%
$p_{1}$%
}} \put(16,48){%
\mbox{%
$1-p_{1}$%
}} \put(16,3){%
\mbox{%
$1-p_{1}$%
}} \put(62,33.5){%
\mbox{%
$p_{2}$%
}} \put(62,18.5){%
\mbox{%
$p_{2}$%
}} \put(56,48){%
\mbox{%
$1-p_{2}$%
}} \put(56,3){%
\mbox{%
$1-p_{2}$%
}} \put(55,29){%
\mbox{%
$1$%
}} \put(-8,42){%
\mbox{%
$\frac{1}{2}$%
}} \put(-8,6){%
\mbox{%
$\frac{1}{2}$%
}} \end{picture}}
\caption{Left: SBES with erasure probability $p$. Right: Interpretation in terms of the common random variable $W$ due to~\citet{cuff13}. }
\label{fig:sbes}
\end{figure}
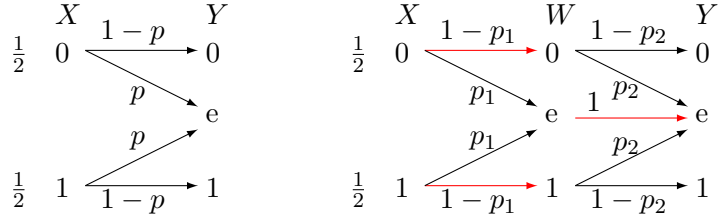
\begin{proposition} \label{prop:sbes}
For the SBES as described in~\eqref{eqn:sbes_jd}, Wyner's common information  is 
\begin{equation}
C_\Wyner(\pi_{XY}) = \left\{ \begin{array}{cc}
1 & p \le 0.5\\
 h(p) & p>0.5
\end{array} \right. . \label{eqn:cwyner_sbes}
\end{equation}
\end{proposition}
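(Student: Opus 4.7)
My plan is to apply the single-letter characterization $C_{\Wyner}(\pi_{XY}) = \min_{X-W-Y} I(XY;W)$ from Theorem~\ref{thm:wynerCI_sim} and establish matching achievability and converse bounds in each regime of $p$. Throughout, I will use $I(XY;W) = H(XY) - H(X|W) - H(Y|W) = 1 + h(p) - H(X|W) - H(Y|W)$, where the conditional independence of $X$ and $Y$ given $W$ is used to split $H(XY|W)$.

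For achievability I would use the ternary auxiliary $W \in \{0,\rme,1\}$ depicted in the right panel of Fig.~\ref{fig:sbes}, where $X \to W$ is a BEC with erasure probability $p_1$ and $W \to Y$ fixes $\rme$ and independently erases $\{0,1\}$ with probability $p_2$, subject to the compatibility constraint $(1-p_1)(1-p_2) = 1-p$. When $p \le 1/2$, taking $p_1 = 0$ so that $W=X$ yields $I(XY;W) = H(X) = 1$. When $p > 1/2$, taking $p_1 = 2p-1$ (and consequently $p_2 = (1-p)/p$), a direct evaluation of $H(W) - H(W|XY)$---using that $W$ is a.s.\ determined by $(X,Y)$ whenever $Y \neq \rme$, while $W \mid (X, Y=\rme)$ is binary on $\{X,\rme\}$ with parameter $(2p-1)/p$---gives $I(XY;W) = h(p)$.

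The substantive part is the converse. Because $\pi_{XY}(0,1) = \pi_{XY}(1,0) = 0$ and $P_{XY \mid W=w}$ must factor as $P_{X \mid W=w}\, P_{Y \mid W=w}$ by the Markov chain, every $w$ in the support of $W$ must fall into exactly one of three regimes: (i) $X = 0$ a.s.\ with $\supp(Y \mid W=w) \subseteq \{0,\rme\}$, (ii) $X = 1$ a.s.\ with $\supp(Y \mid W=w) \subseteq \{1,\rme\}$, or (iii) $Y = \rme$ a.s.\ with $X$ arbitrary on $\{0,1\}$. Let $a,b,c$ be the $W$-masses of these three regimes. The $X \leftrightarrow 1-X$, $Y \leftrightarrow 1-Y$ symmetry of $\pi_{XY}$ allows me to symmetrize $W$ without decreasing $I(XY;W)$, so I may assume $a = b = (1-c)/2$. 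Applying concavity of $h(\cdot)$ to the within-regime averages and using the marginal constraint $P_Y(0) = P_Y(1) = (1-p)/2$ to pin down the averaged erasure parameters, the maximization reduces to the single-variable problem
\begin{equation}
    H(X|W) + H(Y|W) \le g(c) := c + (1-c)\, h\!\left(\frac{p-c}{1-c}\right), \quad c \in [0,p].
\end{equation}

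It remains to analyze $g$. A direct computation yields $g'(0) = 1 + \log p$. Hence for $p \le 1/2$, the function $g$ is non-increasing at the left endpoint, and one checks (comparing with $g(p)=p$) that $\max_c g(c) = g(0) = h(p)$, so $C_{\Wyner} = 1 + h(p) - h(p) = 1$. For $p > 1/2$, the unique interior critical point is $c^\star = 2p-1$, where the argument of $h$ equals $1/2$ and $g(c^\star) = (2p-1) + (2-2p) = 1$; verifying this is the maximum yields $C_{\Wyner} = 1 + h(p) - 1 = h(p)$. The main obstacle is the structural step of the converse: recognizing that the two zeros of $\pi_{XY}$ force the three-regime decomposition of the $W$-support, symmetrizing to kill one degree of freedom, and then reducing the multi-parameter optimization over the $W$-kernel to the one-dimensional function $g(c)$ via Jensen's inequality.
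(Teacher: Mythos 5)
Your argument is essentially correct, and there is no proof in the paper to compare against---Proposition~\ref{prop:sbes} is stated with a citation to \citet{cuff13} and the optimal cascade of Fig.~\ref{fig:sbes}, but no derivation is given. The three-regime support decomposition forced by the two zeros of $\pi_{XY}$, the symmetrization, the Jensen step averaging out the erasure parameters of the $W$-kernel, and the resulting scalar maximization of $g$ together form a valid and fairly standard converse. Two small points should be tightened. First, for $p>1/2$ the choice $p_1=2p-1$ gives $p_2=1/2$, not $(1-p)/p$, since $(1-p_1)(1-p_2)=1-p$ becomes $2(1-p)(1-p_2)=1-p$; you appear to have written the posterior $\Pr(W=X\mid X,\,Y=\rme)=(1-p_1)p_2/p=(1-p)/p$ in its place, but your other posterior $\Pr(W=\rme\mid X,\,Y=\rme)=p_1/p=(2p-1)/p$ is right, so the evaluation $I(XY;W)=h(p)$ still goes through. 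Second, both the endpoint argument for $p\le 1/2$ and the claim that $c^\star=2p-1$ is a \emph{global} maximum for $p>1/2$ rely on concavity of $g$, which should be stated explicitly: $g''(c)=h''\big((p-c)/(1-c)\big)\,(1-p)^2/(1-c)^3<0$. With concavity in hand, $g'(0)=1+\log p\le 0$ immediately places the maximum at $c=0$; merely comparing $g(0)=h(p)$ with $g(p)=p$ does not rule out an interior maximum. Finally, the appeal to ``by symmetry'' for $a=b=(1-c)/2$ is best made rigorous by appending an independent fair coin $B\sim\Bern(1/2)$ to $W$ and applying the bit-flip of $(X,Y)$ when $B=1$; this preserves the Markov chain $X-W'-Y$ and leaves $I(XY;W')=I(XY;W)$ unchanged, while making the two single-symbol regimes equiprobable.
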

The optimal distribution $P_W P_{X|W} P_{Y|W}$ in Wyner's common information for the SBES  is shown in the right diagram of Fig.~\ref{fig:sbes} where  $X$ is uniform on $\calX$ and
  $p_1$ and $p_2 $ satisfy $(1-p_1)(1-p_2)=1-p$. Hence, the channel from $X$ to $Y$ is a concatenation of a binary erasure channel (BEC) with erasure  probability $p_1$ and a BEC-like channel with three inputs $0,\rme$, and $1$ in which, restricted to the inputs in $\{0,1\}$, it is a BEC with erasure probability $p_2$ but $\rme$ is transmitted noiselessly.   Wyner's common information  for an SBES is plotted in Fig.~\ref{fig:eci_sbes}. 

\begin{figure}
\centering
\includegraphics[width = .8\columnwidth]{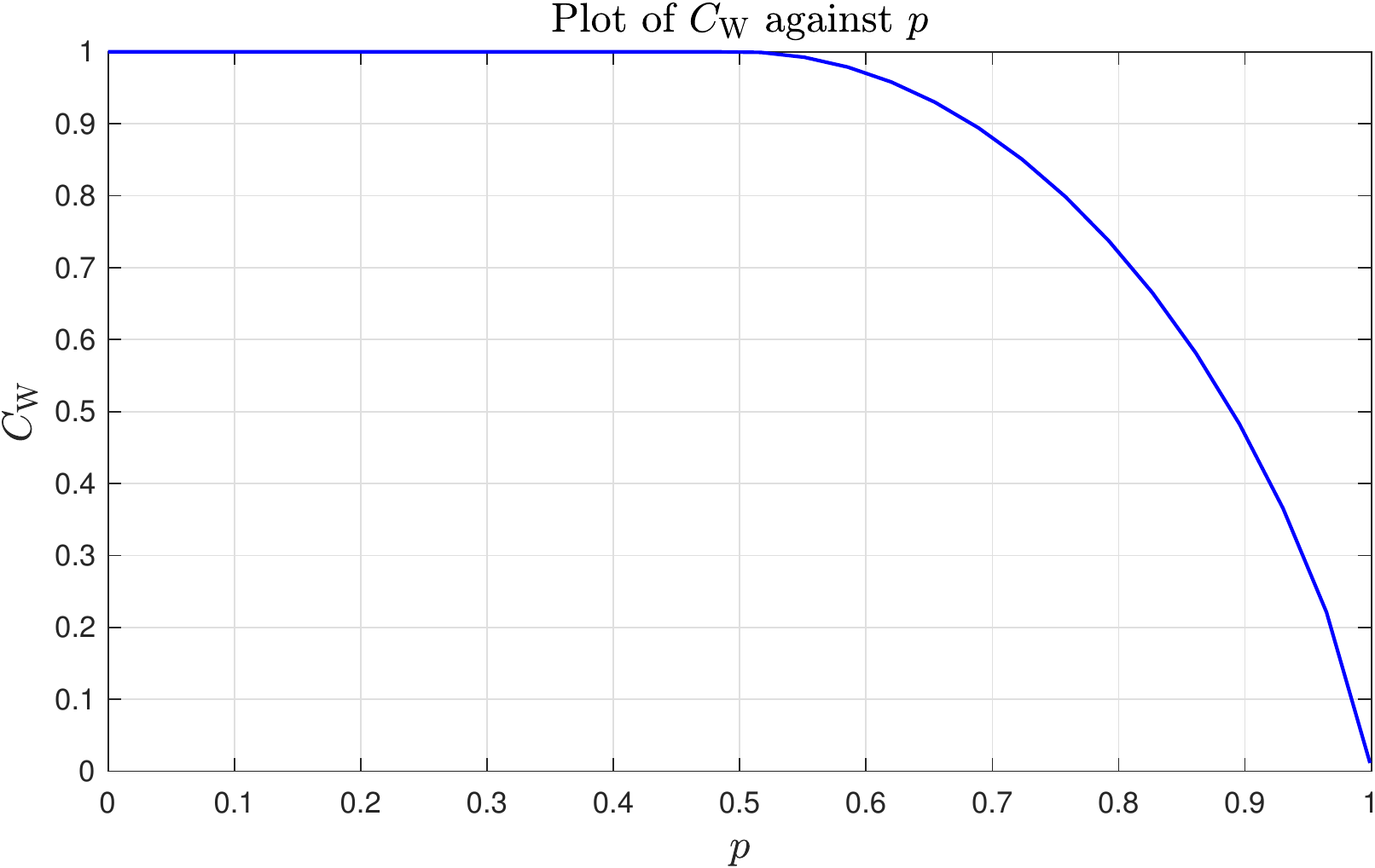}
\caption{Plot of Wyner's common information for the SBES in terms of the erasure probability $p$}
\label{fig:eci_sbes}
\end{figure}
\section{Continuous and Gaussian Sources} \label{sec:cts}
Even though the expression for Wyner's common information in~\eqref{eqn:wyner_sim} remains valid for arbitrary random variables when the $\min$ replaced by an $\inf$, i.e., 
\begin{equation}
C_{\Wyner}(\pi_{XY})=\inf_{ P_WP_{X|W}P_{Y|W}: P_{XY}=\pi_{XY}}I_P(XY;W), \label{eqn:CWyner_inf}
\end{equation}
the operational stories for Wyner's common information for continuous  random variables  are  more intricate. Indeed,  the operational interpretation  in terms of minimum common rate in the Gray--Wyner system (fixing the sum rate to be $H(XY)$) is only applicable to discrete random variables. 
An operational interpretation for continuous random variables in terms of the {\em lossy} Gray--Wyner system \citep{Visw} was discovered by \citet{XuLiuChen}.  
The interpretation in terms of distributed source simulation remains valid, though the result is more subtle \citep{LiElgamal2017, YuTan2020_exact}. In this section, we first generalize Wyner's common information in this direction then discuss generalizations of the  Gray--Wyner system to be amenable to continuous sources. Finally, we justify  why these interpretations yield the same result for jointly Gaussian sources.

%

\subsection{Distributed Source Simulation} \label{sec:dist_ss_cts}
In his seminal paper, \citet{WynerCI} characterized the common information for {\em finite alphabet} sources from the perspective of  distributed source simulation.  
Here, we extend his results to arbitrary  and, in particular, continuous sources in the context of the distributed source simulation problem. The operational quantity $T(\pi_{XY})$ in the following theorem pertains to that in Definition~\ref{def:wyner_CI} (for distributed source simulation).

\begin{theorem}\label{thm:wyner_cts}
Let $(X, Y )\sim \pi_{XY}$ be a joint source with distribution
defined on the product of two arbitrary alphabets. Then we have\footnote{Since  we consider arbitrary probability spaces here, to be formal, we need to generalize several notions in probability theory, e.g., conditional distributions and conditional independence. We use  $P_{X|W}$   to denote a \emph{regular conditional probability distribution}~\cite{cinlar2011probability}. 
Random variables   $X$ and $Y$, defined on an arbitrary measurable space, are  \emph{conditionally independent} given $W$, denoted as $X-W-Y$, if  $\sigma(X)$ and $\sigma(Y)$ are conditionally independent of $\sigma(W)$, where $\sigma(X)$ denotes the $\sigma$-algebra generated by $X$~\cite{cinlar2011probability}. When the regular conditional $P_{XY|W}$ exists, it holds that  $X-W-Y$ if and only if 
$P_{XY|W=w}$ is a product distribution (see Definition~\ref{def:pseudo_pdt}(a) for the countable alphabet case) for $P_W$-almost every $w\in \calW$.  }
\begin{equation}
\tilC_{\Wyner}(\pi_{XY})\le T(\pi_{XY})\le \hatC_{\Wyner}(\pi_{XY}) \label{eqn:wyner_ct_ineq}
\end{equation} 
where 
\begin{align}
\hspace{-.25in}\tilC_{\Wyner}(\pi_{XY}) &:= \lim_{\epsilon\downarrow 0}\inf_{\substack{ P_WP_{X|W}P_{Y|W}: \\ D(P_{XY}\|\pi_{XY})\le\epsilon}}I(XY;W)\quad\mbox{and} \label{eqn:tilC}\\
\hspace{-.25in}\hatC_{\Wyner}(\pi_{XY}) &:=\inf_{\substack{ P_WP_{X|W}P_{Y|W}: \\ P_{XY}=\pi_{XY}}}\lim_{s\downarrow 0}D_{1+s}\big(P_{X|W}P_{Y|W} \big\|P_{XY}\big|P_W \big). \label{eqn:hatC}
\end{align}
\end{theorem}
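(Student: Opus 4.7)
My approach is to prove the two inequalities in \eqref{eqn:wyner_ct_ineq} separately, following the template of Wyner's original argument in \citet{WynerCI} but with modifications needed to handle arbitrary alphabets. The achievability bound requires a R\'enyi-divergence-based soft-covering argument in place of the classical (KL-based) soft-covering via types; the converse requires only a careful treatment of the asymptotic marginal constraint via convexity of relative entropy.

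\textbf{Achievability ($T(\pi_{XY})\le\hat{C}_{\Wyner}(\pi_{XY})$).} Fix any coupling $P_W P_{X|W} P_{Y|W}$ with $X-W-Y$ and $P_{XY}=\pi_{XY}$, and fix any $s>0$. I would draw a random codebook $\{W^n(m)\}_{m\in\calM_n}$ whose codewords are i.i.d.\ under $P_W^n$, and set $P_{X^n|M_n}(\cdot|m):=P_{X|W}^n(\cdot|W^n(m))$ and $P_{Y^n|M_n}(\cdot|m):=P_{Y|W}^n(\cdot|W^n(m))$, so that the synthesized distribution matches \eqref{eqn:syn_dist}. The key tool is a R\'enyi-form soft-covering inequality: a one-shot bound in which the codebook-averaged $\exp(sD_{1+s})$ is controlled by $\exp(snD_{1+s}(P_{X|W}P_{Y|W}\|\pi_{XY}\mid P_W)-s\log|\calM_n|)$ plus a term equal to $\exp(sD_{1+s}(P_{XY}\|\pi_{XY}))$. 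Since $P_{XY}=\pi_{XY}$, that last term equals $1$, so for any rate $R>D_{1+s}(P_{X|W}P_{Y|W}\|\pi_{XY}\mid P_W)$ a deterministic codebook can be extracted for which $D_{1+s}(P_{X^nY^n}\|\pi_{XY}^n)\to 0$ exponentially. Monotonicity $D\le D_{1+s}$ then delivers the normalized-KL criterion \eqref{eqn:norm_re2}. Letting $s\downarrow 0$ and then infimizing over admissible couplings yields $T(\pi_{XY})\le\hat{C}_{\Wyner}(\pi_{XY})$.

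\textbf{Converse ($T(\pi_{XY})\ge\tilde{C}_{\Wyner}(\pi_{XY})$).} Take any sequence of $(n,R)$-codes satisfying \eqref{eqn:norm_re2} and set $W_i:=(M_n,X^{i-1},Y^{i-1})$. Because the code is distributed, $X^n$ and $Y^n$ are conditionally independent given $M_n$, which immediately gives the Markov chain $X_i-W_i-Y_i$ for each $i$. A chain-rule rearrangement produces the identity
\begin{equation*}
I(M_n;X^nY^n)=\sum_{i=1}^{n}I(W_i;X_iY_i)-\Bigl[\sum_{i=1}^{n}H(X_iY_i)-H(X^nY^n)\Bigr],
\end{equation*}
and the bracketed ``independence defect'' is at most $D(P_{X^nY^n}\|\pi_{XY}^n)$ via the decomposition $D(P_{X^nY^n}\|\pi_{XY}^n)=\sum_i D(P_{X_iY_i}\|\pi_{XY})+D(P_{X^nY^n}\|\prod_i P_{X_iY_i})$. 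Hence $nR\ge\sum_i I(W_i;X_iY_i)-D(P_{X^nY^n}\|\pi_{XY}^n)$. Introducing a uniform time-sharing variable $J$ on $[n]$ and setting $W:=(W_J,J)$, $X:=X_J$, $Y:=Y_J$ preserves the Markov chain and yields $I(W;XY)=\frac{1}{n}\sum_i I(W_i;X_iY_i)$. Convexity of KL divergence in its first argument gives $D(P_{XY}\|\pi_{XY})\le\frac{1}{n}\sum_i D(P_{X_iY_i}\|\pi_{XY})\le\frac{1}{n}D(P_{X^nY^n}\|\pi_{XY}^n)\to 0$. Thus for any $\epsilon>0$ and all $n$ large, the constructed coupling witnesses $R+o(1)\ge I(W;XY)$ with $D(P_{XY}\|\pi_{XY})\le\epsilon$, yielding $R\ge\tilde{C}_{\Wyner}(\pi_{XY})$ in the limit $\epsilon\downarrow 0$.

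\textbf{Main obstacle.} The crucial technical ingredient is the R\'enyi soft-covering lemma on general probability spaces: type-class arguments are unavailable, so the one-shot bound must be obtained by direct manipulation of the $(1+s)$-moment of the codebook-averaged density under the random-codebook law. Once this inequality is in hand, the remainder is essentially mechanical. The residual gap between $\tilde{C}_{\Wyner}$ and $\hat{C}_{\Wyner}$ is genuine and reflects two subtleties of the continuous setting: the limit $\lim_{s\downarrow 0}D_{1+s}$ can strictly exceed $D_1=I(XY;W)$, and the marginal constraint cannot in general be imposed exactly under a vanishing-KL criterion. In the finite-alphabet case both bounds collapse to $C_{\Wyner}(\pi_{XY})$, recovering Theorem~\ref{thm:wynerCI_sim}.
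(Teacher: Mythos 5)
Your achievability argument is essentially the same as the paper's: both invoke the one-shot R\'enyi soft-covering lemma (Lemma~\ref{lem:one-shot-sc}) with the $n$-fold substitutions $\pi_U\leftarrow\pi_{XY}^n$, $P_{U|W}\leftarrow P_{X|W}^nP_{Y|W}^n$, exploit $D_{1+s}(P_U\|\pi_U)=0$ because the single-letter marginal matches $\pi_{XY}$, then push $s\downarrow 0$ and infimize. The paper restricts $s\in(0,1]$ (which is what the lemma is stated for, and all you need), but otherwise the two arguments coincide.

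Your converse follows the same skeleton but makes a different choice of auxiliary and carries a genuine, if small, gap. You set $W_i:=(M_n,X^{i-1},Y^{i-1})$ and then $W:=(W_J,J)$; the paper instead applies the chain rule for $D(P_{X^nY^nM_n}\|\pi_{XY}^nP_{M_n})$ and a convexity step (Jensen applied to $Q\mapsto D(Q\|\pi_{XY})$ over the conditional $P_{X^{i-1}Y^{i-1}|M_n}$) that strips $X^{i-1},Y^{i-1}$ from the conditioning, ending with the \emph{smaller} auxiliary $W=(M_n,J)$ and the identity $D(P_{XY|W}\|\pi_{XY}|P_W)=I(XY;W)+D(P_{XY}\|\pi_{XY})$. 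Both auxiliaries satisfy $X-W-Y$, so both are admissible; the paper's choice simply avoids the issue you run into.

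The gap: you assert $I(W;XY)=\tfrac{1}{n}\sum_i I(W_i;X_iY_i)$, but with $W=(W_J,J)$ the correct identity is
\begin{equation*}
I(W;XY)=I(W_J;X_JY_J\mid J)+I(J;X_JY_J)=\frac{1}{n}\sum_{i=1}^n I(W_i;X_iY_i)+I(J;X_JY_J),
\end{equation*}
and $I(J;X_JY_J)\ge0$ can be strictly positive. Since your chain-rule rearrangement only establishes $nR\ge\sum_i I(W_i;X_iY_i)-D(P_{X^nY^n}\|\pi_{XY}^n)$, you have not yet shown $R+o(1)\ge I(W;XY)$, which is what $\tilde C_{\Wyner}$ demands. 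The fix is already implicit in your work: by minimality of the marginal among reference product measures,
\begin{equation*}
I(J;X_JY_J)\le D\big(P_{X_JY_J\mid J}\big\|\pi_{XY}\big|P_J\big)=\frac{1}{n}\sum_{i=1}^n D\big(P_{X_iY_i}\big\|\pi_{XY}\big)\le\frac{1}{n}D\big(P_{X^nY^n}\big\|\pi_{XY}^n\big)\to0,
\end{equation*}
so the slack vanishes. Add this one line (or, more cleanly, adopt the paper's relative-entropy decomposition and the auxiliary $W=(M_n,J)$, which sidesteps the issue entirely), and your converse is complete.

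Your closing remark about the residual gap $\tilde C_{\Wyner}\le\hat C_{\Wyner}$ and the collapse to $C_{\Wyner}$ for finitely supported $\pi_{XY}$ is accurate and consistent with the paper's discussion.
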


This result is due to the present authors~\cite{YuTan2020_exact}. An alternative upper bound on
the Wyner's common information of a set of continuous
random variables in terms of the dual total correlation
between them was derived by~\citet{LiElgamal2017}. We remark that when the joint source $\pi_{XY}$ is finitely supported, both $\tilC_{\Wyner}(\pi_{XY})$ and $\hatC_{\Wyner}(\pi_{XY})$ reduce to Wyner's common information $C_{\Wyner}(\pi_{XY})$ as defined in~\eqref{eqn:wyner_sim}. In particular, we recall that  as $s\downarrow 0$, the conditional R\'enyi divergence $D_{1+s} (P_{X|W}P_{Y|W}\|P_{XY}|P_W)$ reduces to the conditional relative entropy $D(P_{X|W}P_{Y|W}\|P_{XY}|P_W)$ which in turn equals  the mutual information $I(XY;W)$.


We highlight some key ideas of the proof. For the achievability part, we leverage a one-shot (non-asymptotic) soft-covering lemma that can be thought of as a strengthened version of Lemma~\ref{lem:soft-covering}. This result first appeared in the work of the present authors \cite{YuTan2019_wiretap}  en route to proving generalized security theorems for the wiretap channel~\cite{Wyn75,CK78}.
\begin{lemma}[One-Shot Soft-Covering] \label{lem:one-shot-sc}
Let $(U,W)\sim P_{UW} \in\calP(\calU\times\calW)$ be a given pair of random variables defined on some arbitrary measurable space. Consider a random codebook $\scC  = \{W (m): m \in\calM\}$ where $|\calM|=2^{ \lfloor R\rfloor}$ for some $R>0$. For each realization of the codebook $\calC= \{w(m) : m\in\calM\}$, define the synthesized distribution 
\begin{equation}
 P_{U|\scC }(u| \calC ) := \frac{1}{|\calM|}\sum_{m\in\calM}P_{U|W} ( u|w(m)) .
\end{equation}
Let $\pi_U$ be a target distribution such that for some $s\in (0,1]$, both $D_{1+s}(P_{U|W}\|\pi_U|P_W)$ and $D_{1+s}(P_U\|\pi_U)$ exist (and hence are finite). Then  for any $s\in (0,1]$,  we have
\begin{align}
&\exp\big(sD_{1+s}( P_{U|\scC} \|\pi_U |P_{\scC}) \big)\nn\\
&\;\;\;\le \exp\big(sD_{1+s}( P_{U|W} \|\pi_U |P_W)-sR \big)+\exp\big(  sD_{1+s} (P_U\|\pi_U)\big). \label{eqn:one-shot-sc}
\end{align}
\end{lemma}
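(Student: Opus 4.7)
The plan is to expand the left-hand side directly using Sibson's conditional R\'enyi definition and then apply subadditivity and concavity at the right moments. By \eqref{eqn:cond_renyi_div}, the left-hand side equals $\bbE_\scC\bigl[\sum_u P_{U|\scC}(u|\scC)^{1+s}\pi_U(u)^{-s}\bigr]$, so the task reduces to bounding this annealed quantity. Writing $P_{U|\scC}(u|\scC)^{1+s} = P_{U|\scC}(u|\scC)\cdot P_{U|\scC}(u|\scC)^s$, expanding the linear factor as $M^{-1}\sum_m P_{U|W}(u|W(m))$ with $M:=|\calM|$, and exploiting the exchangeability of the codebook indices, the expression collapses to a single-index form $\bbE_\scC\bigl[\sum_u P_{U|W}(u|W(1))\,P_{U|\scC}(u|\scC)^s\,\pi_U(u)^{-s}\bigr]$.

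Next I would decompose the mixture inside the $s$-th power as $P_{U|\scC}(u|\scC) = M^{-1}P_{U|W}(u|W(1)) + (1-M^{-1})\,\tilde P_U(u)$, where $\tilde P_U(u):=(M-1)^{-1}\sum_{m\ne 1}P_{U|W}(u|W(m))$ is independent of $W(1)$. Because $s\in(0,1]$, the elementary subadditivity inequality $(a+b)^s \le a^s + b^s$ cleaves the bound into a ``diagonal'' piece involving only $W(1)$ and a ``cross'' piece involving $W(1)$ and the other $M-1$ codewords. The diagonal piece evaluates to $M^{-s}\exp\bigl(sD_{1+s}(P_{U|W}\|\pi_U|P_W)\bigr)$, which accounts for the first summand on the right-hand side once $M = 2^{\lfloor R\rfloor}$ is inserted. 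For the cross piece, I would take the expectation over $\{W(m)\}_{m\ne 1}$ first: since $x\mapsto x^s$ is concave on $[0,\infty)$ for $s\in(0,1]$, Jensen's inequality gives $\bbE[\tilde P_U(u)^s] \le (\bbE[\tilde P_U(u)])^s = P_U(u)^s$, where the equality uses that $W(m)\sim P_W$ i.i.d.\ so $\bbE[\tilde P_U(u)] = P_U(u)$. The remaining expectation over $W(1)$ then produces $((M-1)/M)^s\sum_u P_U(u)^{1+s}\pi_U(u)^{-s} \le \exp\bigl(sD_{1+s}(P_U\|\pi_U)\bigr)$, which is the second summand.

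The main obstacle is orchestrating the two opposing uses of the same map $x\mapsto x^s$: subadditivity (valid precisely on $s\in[0,1]$) is what separates the contribution of the distinguished codeword $W(1)$ from the rest, while concavity (on the same range) is what lets us push the expectation through the $s$-th power of the remaining average via Jensen. The delicate part is the order of operations---invoke symmetry so that only $W(1)$ is singled out, then apply subadditivity to split into diagonal and cross terms, and finally integrate the cross term over $W(2),\ldots,W(M)$ before $W(1)$ so that independence of the two groups of codewords makes Jensen both applicable and tight. Everything else is bookkeeping; no large deviations, change-of-measure, or type-counting arguments are needed, which is what makes the bound genuinely one-shot and applicable to arbitrary measurable alphabets (with sums replaced by integrals against the dominating measure).
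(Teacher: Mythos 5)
Your proof is correct and takes essentially the same route as the paper's (and its source reference): expanding $P_{U|\scC}^{1+s}=P_{U|\scC}\cdot P_{U|\scC}^{s}$ and symmetrizing onto a single distinguished codeword, applying the elementary subadditivity $(a+b)^s\le a^s+b^s$ for $s\in(0,1]$ to split the diagonal term from the cross term, and then closing the cross term with Jensen's inequality via concavity of $x\mapsto x^s$ on the independent bulk are exactly the ingredients of the original derivation. The only loose thread is notational: with $|\calM|=2^{\lfloor R\rfloor}$ the diagonal term is $2^{-s\lfloor R\rfloor}\exp\big(sD_{1+s}(P_{U|W}\|\pi_U|P_W)\big)$, which matches the claimed $\exp\big(sD_{1+s}(P_{U|W}\|\pi_U|P_W)-sR\big)$ only if one reads $R$ in the bound as $\log|\calM|$; that ambiguity is already in the lemma as stated (and is immaterial in the asymptotic applications), not a defect of your argument.
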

By setting  $\pi_U \leftarrow \pi_{XY}^n$, $P_{U|W}\leftarrow P_{X|W}^n P_{Y|W}^n$, $P_W\leftarrow P_W^n$ and $R\leftarrow nR$, for some distribution $P_W P_{X|W} P_{Y|W}$ such that its marginal on $(X,Y)$ equals $\pi_{XY}$, Lemma \ref{lem:one-shot-sc} tells us that if 
\begin{equation}
  R> D_{1+s}( P_{X|W}P_{Y|W}\| \pi_{XY} | P_W),
  \end{equation}  
  then $D_{1+s}(P_{X^n Y^n | \scC_n}\|\pi_{XY}^n | P_{\scC_n})\to 0$. Thus, we conclude that there exists (at least) one sequence of (deterministic) codebooks $\{\calC_n\}_{n=1}^\infty$ such that 
  \begin{align} 
  & D(P_{X^nY^n|\scC_n } (\cdot|  \calC_n)\|\pi_{XY}^n) \\ 
  & \quad \le D_{1+s}(P_{X^nY^n|\scC_n}(\cdot|  \calC_n)\|\pi_{XY}^n)\to 0.
  \end{align}
   Letting   $s$ tend to $0$ (from above) and minimizing over all $P_W P_{X|W}P_{Y|W}$ concludes the proof of the achievability part. 
  
\begin{remark} \label{rmk:unnorm} The reader will observe that what we have proved is stronger than what Definition~\ref{def:wyner_CI} demands of a common information code. The one-shot soft-covering lemma as stated in Lemma~\ref{lem:one-shot-sc}  is strong enough to drive the {\em unnormalized} relative  entropy $D(P_{X^nY^n}\|\pi_{XY}^n)$ to zero   as $n\to\infty$. Compare this to~\eqref{eqn:norm_re2} in which the {\em normalized} relative entropy $\frac{1}{n}D(P_{X^nY^n}\|\pi_{XY}^n)$ is required to vanish. This strengthening will be central to our discussion in Part~\ref{part:two}.
\end{remark}
The converse follows from  standard single-letterization steps that we   outline here. Fix any code $(P_{X^n|M_n},P_{Y^n|M_n})$ per Definition~\ref{def:wyner_code}. Observe that 
\begin{align}
R &\ge \frac{1}{n}H(M_n) \ge \frac{1}{n} I(X^nY^n; M_n)\label{eqn:single_letter0}  \\*
&= \frac{1}{n}D\big(P_{ X^nY^n  M_n} \big\| P_{X^nY^n}P_{M_m} \big) \\*
&= \frac{1}{n}D\big(P_{ X^nY^n  M_n} \big\| \pi_{X^nY^n}P_{M_m} \big) - \frac{1}{n}D\big(P_{ X^nY^n} \big\| \pi_{XY}^n\big) . \label{eqn:single_letter}
\end{align}
The first term can be further lower bounded as 
\begin{align}
& \frac{1}{n}D\big(P_{ X^nY^n  M_n} \big\| \pi_{X^nY^n}P_{M_m} \big) \nn\\
&=\frac{1}{n}\sum_{i=1}^n D\big( P_{X_i Y_i | M_n X^{i-1} Y^{i-1} }\big\| \pi_{XY} \big| P_{M_nX^{i-1} Y^{i-1}} \big)\label{eqn:single_letter4}\\
&\ge\frac{1}{n} \sum_{i=1}^n D \big( P_{X_i Y_i |M_n} \big\| \pi_{XY} \big| P_{M_n}\big) \label{eqn:single_letter3}\\
&= D\big( P_{X_J Y_J | M_n J} \big\|\pi_{XY} \big| P_{M_n J } \big)=D\big( P_{XY| W} \big\|\pi_{XY} \big| P_{W } \big), \label{eqn:single_letter2}
\end{align}
where \eqref{eqn:single_letter4} follows from the chain rule for relative entropy, \eqref{eqn:single_letter3} follows from the convexity of the relative entropy, and \eqref{eqn:single_letter2} follows from introducing $J\sim \mathrm{Unif}[n]$  independent of $(M_n  , X^n,  Y^n)$ and by setting $X:=X_J$, $Y:=Y_J $ and $W:=(M_n, J)$. These identifications of the random variables satisfy the Markovity condition $X-W-Y$.  Using similar steps, we can show that $D(P_{XY}\|\pi_{XY}) \le\frac{1}{n}D(P_{X^nY^n}\|\pi_{XY}^n)$. Since   the code requires that  the final term in \eqref{eqn:single_letter}  to vanish, $D(P_{XY}\|\pi_{XY})$ also vanishes. This establishes the bound $D(P_{XY}\|\pi_{XY}) \le\epsilon$ for any $\epsilon>0$  and any $X-W-Y$ satisfying $P_{XY}=\pi_{XY}$. Taking $\epsilon\downarrow 0$ completes the proof of the converse part of Theorem~\ref{thm:wyner_cts}.

It is natural to wonder when  $C_{\Wyner}(\pi_{XY})$, $\tilC_{\Wyner}(\pi_{XY})$ and $\hatC_{\Wyner}(\pi_{XY})$, as defined in \eqref{eqn:CWyner_inf}, \eqref{eqn:tilC}, and \eqref{eqn:hatC} respectively coincide, beyond the  case in which $\pi_{XY}$ is finitely supported. This is partially addressed in the following   proposition due to the present authors~\cite{YuTan2020_exact}.  

\begin{proposition} \label{prop:reg_wyner_cts}
The following hold:
\begin{itemize}
\item If there exists a joint distribution $P_W P_{X|W} P_{Y|W}$ that attains $C_{\Wyner}(\pi_{XY})$ and satisfies $D_{1+s}(P_{X|W}P_{Y|W}\|P_{XY}|P_W)<\infty$ for some $s>0$, then $C_{\Wyner}(\pi_{XY})= \hatC_{\Wyner}(\pi_{XY})$.
\item Assume that  $\pi_{XY}$ is an absolutely continuous distribution on $\bbR^2$ with PDF $f_{XY}$  such that $C_{\Wyner}(\pi_{XY})= \hatC_{\Wyner}(\pi_{XY})$ (e.g., based on the sufficient condition in the point above), $f_{XY}$ is log-concave,\footnote{This means that $\log f_{XY}$ is concave on $\bbR^2$.} and $I(X;Y)<\infty$. For each $d>0$, define the constant 
\begin{equation}
\kappa_d :=\sup_{(x,y)\in [-d,d]^2} \left|\frac{\partial}{\partial x}\log f_{XY}(x,y) \right|+  \left|\frac{\partial}{\partial y}\log f_{XY}(x,y) \right|
\end{equation}
and $\epsilon_d:=1-\pi_{XY}( [ -d,d]^2)$. If $\epsilon_d\log(d\kappa_d)\to 0$ as $d\to\infty$, then all inequalities in \eqref{eqn:wyner_ct_ineq} become equalities. 
\end{itemize}
\end{proposition}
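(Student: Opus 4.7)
I would establish the two bullet points in Proposition~\ref{prop:reg_wyner_cts} separately.

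\emph{For the first bullet point}, I split the desired equality into two inequalities. The easy direction $\hatC_{\Wyner}(\pi_{XY}) \ge C_{\Wyner}(\pi_{XY})$ follows from monotonicity of the R\'enyi divergence in its order. For any feasible $P_W P_{X|W} P_{Y|W}$ whose marginal equals $\pi_{XY}$, the Markov chain $X-W-Y$ gives $P_{XY|W}=P_{X|W}P_{Y|W}$, so $D_{1+s}(P_{X|W}P_{Y|W} \| \pi_{XY} | P_W)$ is non-decreasing in $s$ and dominates $D(P_{X|W}P_{Y|W} \| \pi_{XY} | P_W) = I_P(XY;W)$ for every $s>0$. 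Taking $s\downarrow 0$ and then the infimum over feasible laws yields $\hatC_{\Wyner}(\pi_{XY}) \ge C_{\Wyner}(\pi_{XY})$. For the reverse direction, I would insert the hypothesized minimizer $P^* = P^*_W P^*_{X|W} P^*_{Y|W}$ directly into the definition of $\hatC_{\Wyner}$. Since $D_{1+s}(P^*_{X|W}P^*_{Y|W} \| \pi_{XY} | P^*_W) < \infty$ for some $s>0$ by assumption, the standard continuity of $\alpha \mapsto D_\alpha$ at $\alpha=1$ (provable by L'H\^opital applied to $s \mapsto \tfrac{1}{s}\log \int(\mathrm{d}P/\mathrm{d}Q)^s\,\mathrm{d}P$, or by dominated convergence, using finiteness at some positive $s$) gives $\lim_{s\downarrow 0} D_{1+s}(P^*_{X|W}P^*_{Y|W}\|\pi_{XY}|P^*_W) = I_{P^*}(XY;W) = C_{\Wyner}(\pi_{XY})$, establishing the upper bound.

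\emph{For the second bullet point}, Theorem~\ref{thm:wyner_cts} gives $\tilC_{\Wyner}(\pi_{XY}) \le T(\pi_{XY}) \le \hatC_{\Wyner}(\pi_{XY})$, and the first bullet supplies $\hatC_{\Wyner}(\pi_{XY}) = C_{\Wyner}(\pi_{XY})$; it therefore suffices to prove $\tilC_{\Wyner}(\pi_{XY}) \ge C_{\Wyner}(\pi_{XY})$. My strategy is a truncation-and-correction argument. Pick a sequence $P^{(k)} = P^{(k)}_W P^{(k)}_{X|W} P^{(k)}_{Y|W}$ with $D(P^{(k)}_{XY}\|\pi_{XY}) \downarrow 0$ and $I_{P^{(k)}}(XY;W) \to \tilC_{\Wyner}(\pi_{XY})$. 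For each scale $d$, split the joint law into its ``bulk'' on $[-d,d]^2$, carrying probability at least $1-\epsilon_d$, and its tail. On the bulk, I would exploit log-concavity of $f_{XY}$ together with the gradient bound $\kappa_d$ to build a replacement Markov kernel whose $(X,Y)$-marginal coincides with the bulk restriction of $\pi_{XY}$ and whose mutual information with $W$ exceeds $I_{P^{(k)}}(XY;W)$ by at most a controlled error (bounded via Pinsker's inequality, $\kappa_d$, and the box size). The tail contributes an additional error of order $\epsilon_d \log(d\kappa_d)$ through standard entropy-continuity bounds, so that sending $k\to\infty$ then $d\to\infty$ and invoking $\epsilon_d\log(d\kappa_d) \to 0$ closes the gap.

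\emph{Main obstacle.} The crux is the bulk correction step: given the sub-optimal marginal $P^{(k)}_{XY}$ on $[-d,d]^2$, one must redistribute conditional mass under each fixed $w$ so that the joint $(X,Y)$ marginal matches $\pi_{XY}$ exactly while preserving the Markov structure $X-W-Y$ and adding at most $o(1)$ to the mutual information. Without log-concavity, moving the conditional density onto $\pi_{XY}$ can require large reshuffling that inflates $I(XY;W)$ uncontrollably; log-concavity together with $\kappa_d$ translates marginal discrepancies into Lipschitz-controlled kernel perturbations and keeps the information penalty quantitatively small. A secondary technical point is establishing tightness of $\{P^{(k)}_W\}_{k\in\bbN}$ on the arbitrary, possibly non-locally-compact alphabet $\calW$, so that one can extract a convergent subsequence and legitimately pass to the limit after the bulk correction is applied.
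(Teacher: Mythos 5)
The first bullet is argued correctly: $\hatC_{\Wyner}\ge C_{\Wyner}$ by monotonicity of the R\'enyi divergence in its order plus the Markov identity $D(P_{X|W}P_{Y|W}\|P_{XY}|P_W)=I(XY;W)$; and $\hatC_{\Wyner}\le C_{\Wyner}$ by inserting the hypothesized minimizer and invoking continuity of $s\mapsto D_{1+s}$ at $s=0^+$, which is legitimate because the divergence is finite at some positive order. This is the expected proof.

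For the second bullet, your reduction is also right: combine Theorem~\ref{thm:wyner_cts} with the first bullet, so the only thing left is $\tilC_{\Wyner}(\pi_{XY})\ge C_{\Wyner}(\pi_{XY})$. But the truncation-and-correction scheme has a real gap at exactly the step you flag as the crux. You say you will ``build a replacement Markov kernel whose $(X,Y)$-marginal coincides with the bulk restriction of $\pi_{XY}$'' while keeping $X-W-Y$ and paying only $o(1)$ in mutual information — but you never say how. To preserve the Markov factorization, the correction applied to each $P_{X|W=w}\otimes P_{Y|W=w}$ must itself be a product map $(x,y)\mapsto(T_1(x),T_2(y))$ (otherwise the pushed-forward conditional is no longer a product). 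However, a product map applied to $P_{XY}$ preserves its copula, so unless $P_{XY}^{(k)}$ and $\pi_{XY}$ already share a copula, no such $T$ can satisfy $T_\#P_{XY}^{(k)}=\pi_{XY}$; matching the two univariate marginals is not enough. Conversely, any genuinely bivariate reshuffling that fixes the joint marginal will in general destroy the product structure of $P_{XY|W}$. Your sketch does not resolve this tension — you gesture at log-concavity and $\kappa_d$ as controlling ``Lipschitz kernel perturbations,'' but log-concavity bounds the density of $\pi_{XY}$, not the copula mismatch of $P^{(k)}_{XY}$, and it is not explained why the information penalty comes out to $O(\epsilon_d\log(d\kappa_d))$ rather than something uncontrolled. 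A route more consistent with the form of the hypothesis is to discretize: quantize $(X,Y)$ to a finite grid of resolution tied to $\kappa_d$ inside $[-d,d]^2$ plus a tail symbol, use the data-processing inequality $I_P(XY;W)\ge I_P([X][Y];W)$, exploit continuity of $C_{\Wyner}$ on finite alphabets (with $|\calW|$ bounded by the support lemma) as $D(P^{(k)}_{[X][Y]}\|\pi_{[X][Y]})\to0$, and finally show that $C_{\Wyner}(\pi_{[X][Y]})\to C_{\Wyner}(\pi_{XY})$ as the grid refines, which is where $\kappa_d$, log-concavity, $I(X;Y)<\infty$, and $\epsilon_d\log(d\kappa_d)\to0$ do the work; the factor $\log(d\kappa_d)$ then appears naturally as the log-cardinality of the grid. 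Finally, the tightness remark at the end is a red herring: once each $P^{(k)}$ has been matched to a law $\tilde P^{(k)}$ with $\tilde P^{(k)}_{XY}=\pi_{XY}$ and $I_{\tilde P^{(k)}}(XY;W)\le I_{P^{(k)}}(XY;W)+o(1)$, you immediately get $I_{P^{(k)}}(XY;W)\ge C_{\Wyner}(\pi_{XY})-o(1)$ by definition of the infimum, without extracting any convergent subsequence on $\calW$.
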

It holds that jointly Gaussian sources  satisfy both regularity conditions in Proposition~\ref{prop:reg_wyner_cts}. This will be discussed in detail in Section~\ref{sec:wyner_gauss}.

\subsection{Lossy Gray--Wyner System}\label{sec:lossy-gw}

 An operational interpretation for continuous random variables in terms of the {\em lossy} Gray--Wyner system \citep{Visw} was discovered by \citet{XuLiuChen}. 
Recall that in the Gray--Wyner problem, one seeks to reconstruct a pair of sources losslessly. Obviously, this is only meaningful if the sources are discrete otherwise they cannot be reliably reconstructed with probability one for all finite rates. However, if one allows for the sources to be reconstructed to within some distortion levels, then it is meaningful to discuss the tradeoff between the rates $(R_0, R_1, R_2)$ and allowable distortions. To this end, we introduce two per-letter distortion measures $d_1:\calX\times\hat{\calX}\to[0,\infty)$ and $d_2:\calY\times\hat{\calY}\to [0,\infty)$ that operate on length-$n$ sequences as follows: $d_1(x^n , \hatx^n)=\frac{1}{n}\sum_{i=1}^n d_1(x_i,\hatx_i)$ and similarly for $d_2$.  Instead of demanding that the probability of error in~\eqref{eqn:prob_error} vanishes, in the lossy case, we only require the reconstructions $(\hatX^n,\hatY^n)$ in Fig.~\ref{fig:gw} to satisfy
\begin{equation}
 \limsup_{n\to\infty}\bbE\big[d_1(X^n,\hatX^n)\big]\!\le \!\Delta_1\;\;  \mbox{and}\; \;\limsup_{n\to\infty} \bbE\big[d_2(Y^n,\hatY^n)\big] \!\le \! \Delta_2 \label{eqn:gw_dist}
\end{equation}
for some permissible distortions $\Delta_1 $ and $\Delta_2$. 
This is known as the {\em lossy} Gray--Wyner system \citep{Visw}. 
Similarly to Definition~\ref{def:gw_R0},  we define the  {\em $(\Delta_1,\Delta_2)$-Pangloss-common information based on the lossy Gray--Wyner system}  $T_{\mathrm{GW}}(\pi_{XY}; \Delta_1, \Delta_2)$  to be the infimum of all common rates $R_0$ such that for each $\epsilon>0$,  there exists a sequence of Gray--Wyner codes satisfying
the distortion constraints in \eqref{eqn:gw_dist} and
\begin{equation}
R_0+R_1+R_2\le R_{XY}(\Delta_1,\Delta_2)+\epsilon, \label{eqn:pangloss_RD}
\end{equation}
for all sufficiently large $n$, 
 where  the {\em joint rate-distortion function}  is defined as 
\begin{equation}
 R_{XY}(\Delta_1,\Delta_2):= \inf_{  P_{\hatX\hatY|XY} : \bbE[d_1(X,\hatX)]\le \Delta_1,\; \bbE[d_2(Y,\hatY)]\le \Delta_2 } I(XY;\hatX\hatY). \label{eqn:joint_RDF}
\end{equation}
The {\em Pangloss plane} in this lossy case is given by the set of $(R_0,R_1,R_2)$ such that \eqref{eqn:pangloss_RD} holds with equality. The quantity  $T_{\mathrm{GW}}(\pi_{XY}; \Delta_1, \Delta_2)$, in general, depends on $(\Delta_1,\Delta_2)$.  However, \citet[Theorem~5]{XuLiuChen} showed that   in certain non-degenerate cases, this dependence vanishes.
\begin{theorem} \label{thm:gw_cts}
Let $P_WP_{X|W}P_{Y|W}$ be any distribution that achieves the infimum in the optimization problem in~\eqref{eqn:CWyner_inf}. Let the reproduction alphabets $\hat{\calX} =\calX$ and $\hat{\calY}=\calY$ and the two  distortion
measures $d_1$ and $d_2$ satisfy $d_1(x, \hatx)>d_1(x, x)=0$ for all $x\ne \hatx$ and $d_2(y, \haty)>d_2(y, y)=0$ for all $y\ne\haty$.  If the following conditions are satisfied
\begin{itemize}
\item For any $w\in\calW$, $x\in\calX$ and $y\in\calY$, $P_{W|XY}(w|x,y)>0$, 
\item There exists $\hatx\in \calX$ and $\haty\in  \calY$ such that 
\begin{equation}
\bbE[ d_1(X,\hatx)]<\infty \quad\mbox{and}\quad \bbE[ d_2(Y,\haty)]<\infty.
\end{equation}
\end{itemize}
Then there exists a positive constant $\gamma$ such that for all $0\le \Delta_1,\Delta_2\le\gamma$, 
\begin{equation}
T_{\mathrm{GW}}(\pi_{XY}; \Delta_1, \Delta_2)=C_{\Wyner}(\pi_{XY}). \label{eqn:gw_cont}
\end{equation}
\end{theorem}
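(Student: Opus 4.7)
I will split Theorem~\ref{thm:gw_cts} into the two one-sided inequalities $T_{\mathrm{GW}}(\pi_{XY};\Delta_1,\Delta_2)\le C_{\Wyner}(\pi_{XY})$ (achievability) and $T_{\mathrm{GW}}(\pi_{XY};\Delta_1,\Delta_2)\ge C_{\Wyner}(\pi_{XY})$ (converse). The positivity assumption $P^{*}_{W\mid XY}(w\mid x,y)>0$ on the optimizer of \eqref{eqn:CWyner_inf} together with the finite-distortion assumption is used to ensure a continuity/perturbation argument: for sufficiently small $(\Delta_1,\Delta_2)$, the test channel achieving the joint rate--distortion function $R_{XY}(\Delta_1,\Delta_2)$ can be realized as a small perturbation of the Wyner common variable $W$, making the two-sided bound tight simultaneously.

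\textbf{Achievability.} Fix an optimizer $P^{*}_{W}P^{*}_{X\mid W}P^{*}_{Y\mid W}$ of $C_{\Wyner}(\pi_{XY})$ with $P^{*}_{XY}=\pi_{XY}$. For small $\Delta_1,\Delta_2>0$, build conditional test channels $P_{\hat X\mid X,W}$ and $P_{\hat Y\mid Y,W}$ that achieve the conditional rate--distortion functions $R_{X\mid W}(\Delta_1)$ and $R_{Y\mid W}(\Delta_2)$ at cost $\bbE[d_1(X,\hat X)]\le\Delta_1$, $\bbE[d_2(Y,\hat Y)]\le\Delta_2$; by the second hypothesis (finite distortion witnesses), these test channels exist and are continuous at $(0,0)$. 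The three-layer coding scheme is then standard: the common encoder $f_0$ uses the soft-covering construction of Lemma~\ref{lem:one-shot-sc} on codebook $\{W^n(m_0)\}_{m_0}$ at rate $R_0=I(XY;W)+\epsilon/3=C_{\Wyner}(\pi_{XY})+\epsilon/3$, so that the induced joint law on $(X^n,Y^n,W^n)$ is close in total variation to $(P^{*}_{WXY})^n$; conditional on $W^n$, the private encoders $f_1,f_2$ use Wyner--Ziv-style coding at rates $R_1=I(X;\hat X\mid W)+\epsilon/3$ and $R_2=I(Y;\hat Y\mid W)+\epsilon/3$ to describe $\hat X^n,\hat Y^n$. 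The key computation to verify is
\begin{align}
R_0+R_1+R_2 \le I(XY;W)+I(X;\hat X\mid W)+I(Y;\hat Y\mid W)+\epsilon = R_{XY}(\Delta_1,\Delta_2)+\epsilon,
\end{align}
which, by the Markov chain $\hat X-(X,W)-(Y,W)-\hat Y$ induced by the construction, reduces to showing that the right-hand side equals $I(XY;\hat X\hat Y)$ for the jointly optimal $(\hat X,\hat Y)$ when $(\Delta_1,\Delta_2)$ are small.

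\textbf{Converse.} Consider any sequence of $(\Delta_1,\Delta_2)$-admissible codes meeting \eqref{eqn:pangloss_RD}. Introduce the usual single-letterizing auxiliary $W_i:=(M_0,X^{i-1},Y^{i-1})$ and a time-sharing variable $J\sim\mathrm{Unif}[n]$ independent of everything, and set $W:=(W_J,J)$, $X:=X_J$, $Y:=Y_J$, $\hat X:=\hat X_J$, $\hat Y:=\hat Y_J$. Classical Gray--Wyner manipulations give $R_0\ge\frac{1}{n}I(X^nY^n;M_0)\ge I(XY;W)$ and $R_0+R_1+R_2\ge I(XY;W)+I(X;\hat X\mid W)+I(Y;\hat Y\mid W)$, while the distortion constraints propagate to $\bbE[d_1(X,\hat X)]\le\Delta_1$ and $\bbE[d_2(Y,\hat Y)]\le\Delta_2$. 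The sum-rate hypothesis $R_0+R_1+R_2\le R_{XY}(\Delta_1,\Delta_2)+\epsilon$ forces each single-letter inequality above to be nearly tight, which in particular forces $\hat X-W-\hat Y$ to be (nearly) Markov. Sending $\epsilon\downarrow 0$ and $\Delta_1,\Delta_2\downarrow 0$ along a converging subsequence, the resulting limit distribution $\bar P_{WXY}$ satisfies $\bar P_{XY}=\pi_{XY}$ with $X-W-Y$, so $R_0\ge I_{\bar P}(XY;W)\ge C_{\Wyner}(\pi_{XY})$.

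\textbf{The main obstacle.} The genuinely delicate step is the equality
\begin{align}
R_{XY}(\Delta_1,\Delta_2)=I(XY;W)+I(X;\hat X\mid W)+I(Y;\hat Y\mid W)
\end{align}
for all sufficiently small $(\Delta_1,\Delta_2)$, which underpins both directions. For this I would argue as follows: let $(\hat X^{\circ},\hat Y^{\circ})$ attain $R_{XY}(\Delta_1,\Delta_2)$; by the positivity assumption $P^{*}_{W\mid XY}(w\mid x,y)>0$ for all $(w,x,y)$, the map $(\Delta_1,\Delta_2)\mapsto$ (optimal test channel) is continuous at $(0,0)$, and at $(0,0)$ it degenerates to the identity, so $(\hat X^{\circ},\hat Y^{\circ})$ lies in a neighborhood of $(X,Y)$ on which the induced conditional $P_{W\mid \hat X^{\circ}\hat Y^{\circ}}$ inherits the Markov structure $\hat X^{\circ}-W-\hat Y^{\circ}$. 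The threshold $\gamma$ in the theorem statement is the largest such distortion for which this neighborhood argument is valid. Once this Markov structure is in hand, the chain-rule expansion of $I(XY;\hat X^{\circ}\hat Y^{\circ})$ through $W$ yields the displayed identity, closing the loop between achievability and converse.
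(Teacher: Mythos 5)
There is no proof of this theorem in the monograph---it is stated and attributed to Xu, Liu and Chen---so I can only assess your argument on its own merits, and there are genuine gaps in the two places that carry the theorem's real content.

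First, your converse proves a weaker statement than the theorem asserts. You single-letterize at fixed $(\Delta_1,\Delta_2)$, observe that the Pangloss constraint $R_0+R_1+R_2\le R_{XY}(\Delta_1,\Delta_2)+\epsilon$ forces the chain of inequalities to be ``nearly tight,'' and then send \emph{both} $\epsilon\downarrow0$ and $\Delta_1,\Delta_2\downarrow0$ along a subsequence to obtain a limit distribution with an exact Markov chain $X-W-Y$. But the theorem claims $T_{\mathrm{GW}}(\pi_{XY};\Delta_1,\Delta_2)=C_{\Wyner}(\pi_{XY})$ for every \emph{fixed} $(\Delta_1,\Delta_2)$ in $[0,\gamma]^2$, not only in the limit $\Delta\to0$. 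At fixed $\Delta$, the single-letter auxiliary $W$ that you obtain from a near-Pangloss code need not satisfy $X-W-Y$, and so $I(XY;W)$ is not \emph{a priori} bounded below by $C_{\Wyner}(\pi_{XY})$. You need a quantitative argument showing that, for $\Delta$ below a threshold depending on the positivity margin of $P^*_{W|XY}$, the sum-rate constraint forces $I(XY;W)\ge C_{\Wyner}(\pi_{XY})-\delta(\epsilon)$ with $\delta(\epsilon)\to0$; you cannot substitute a limiting argument in $\Delta$ for that.

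Second, the identity $R_{XY}(\Delta_1,\Delta_2)=I(XY;W)+I(X;\hat X\mid W)+I(Y;\hat Y\mid W)$ that you flag as the ``main obstacle'' is precisely where the work is, and your resolution does not establish it. Continuity of the $R_{XY}$-optimal test channel at $(0,0)$ gives that $P_{W|\hat X^{\circ}\hat Y^{\circ}}$ is \emph{close} to a distribution factoring as $\hat X^{\circ}-W-\hat Y^{\circ}$; it does not make it factor exactly, and an approximate Markov chain does not produce the exact chain-rule identity you need. Moreover, the positivity of $P^*_{W|XY}$ is not used in a way that could make it; that hypothesis must enter quantitatively---e.g.\ to bound the second-order behavior of $R_{XY}$ near zero distortion or to uniformly control the backward-channel Jacobian so that the constant $\gamma$ can actually be exhibited---whereas in your sketch it appears only as a rhetorical license to invoke ``continuity.'' As written, the achievability direction (which also leans on this identity through the bound $R_0+R_1+R_2\le R_{XY}(\Delta_1,\Delta_2)+\epsilon$) is therefore open as well. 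The high-level decomposition into achievability plus converse is the right scaffolding, and the use of likelihood encoding with the one-shot soft-covering lemma for the common description is fine; the substance of the theorem, namely the derivation of the explicit positive threshold $\gamma$ from the positivity hypothesis, is missing.
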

In other words, under relatively mild conditions, for sufficiently small distortion levels, $T_{\mathrm{GW}}(\pi_{XY}; \Delta_1, \Delta_2)$ does not depend on $(\Delta_1,\Delta_2)$ and additionally,  there admits an operational interpretation of the expression on the right-hand side of \eqref{eqn:gw_cont}, i.e., it is the minimum common rate of the lossy Gray--Wyner system for small distortion levels. Moreover, if the regularity conditions of Proposition~\ref{prop:reg_wyner_cts} also hold, then the two operational definitions for the common information for continuous sources (as presented in Sections~\ref{sec:dist_ss_cts} and \ref{sec:lossy-gw}) coincide. This dovetails nicely with the discrete case.

From now on, we assume that $0\le \Delta_1,\Delta_2\le \gamma $ so it is permissible to write $T_{\mathrm{GW}}(\pi_{XY}; \Delta_1, \Delta_2)$ interchangeably as $T(\pi_{XY})$ or $C_{\Wyner}(\pi_{XY})$. 
\subsection{Jointly Gaussian Sources}\label{sec:wyner_gauss}
In this section, we consider jointly Gaussian sources. Our discussions up until this point inform us that there are two ways of computing Wyner's common information for such sources. In particular,  \citet{XuLiuChen}  and \citet{YuTan2020_exact}  used  Theorem~\ref{thm:gw_cts} and Proposition~\ref{prop:reg_wyner_cts}  respectively to compute  $T(\pi_{XY})$ for a jointly Gaussian source.  

Let $(X,Y)\sim\pi_{XY}$ be a pair of jointly Gaussian random variables with covariance matrix given by
\begin{equation}
\bK = \begin{bmatrix}
1 & \rho \\ \rho & 1
\end{bmatrix}. \label{eqn:cov_mat}
\end{equation}
The constant $\rho \in (-1,1)$ is  known as the {\em correlation coefficient} of $X$ and $Y$. Without loss of generality, it suffices for us to consider $\rho\in [0,1)$. Otherwise, we can replace $X$ by $-X$ and the results go through {\em mutatis mutandis} with $\rho$ replaced by $-\rho$.\footnote{Equivalently, if we do not make the assumption that $\rho \in [0,1)$, the results for Gaussian sources here and in the following would hold with $\rho$ replaced by $|\rho|$.}  We expect that as $\rho\downarrow 0$, the common information  $C_{\Wyner}(\pi_{XY})$ should tend to $0$ as $X$ and $Y$ tend towards being independent. On the other hand as $\rho\uparrow 1$, $C_{\Wyner}(\pi_{XY})$ should increase as  $X$ and $Y$ tend towards being completely dependent.  The following proposition is due to \citet{XuLiuChen} and \citet{YuTan2020_exact}.
\begin{proposition}
For a jointly Gaussian source with correlation coefficient $\rho\in [0,1)$, Wyner's common information is
\begin{equation}
T(\pi_{XY})=C_{\Wyner}(\pi_{XY})= \frac{1}{2}\log\left(\frac{1+\rho}{1-\rho}\right) . \label{eqn:gaussian_wci}
\end{equation}
\end{proposition}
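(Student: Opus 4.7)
The plan is to establish $C_{\Wyner}(\pi_{XY}) = \tfrac{1}{2}\log\tfrac{1+\rho}{1-\rho}$ via the single-letter formula in \eqref{eqn:CWyner_inf} and to transfer this equality to the operational rate $T(\pi_{XY})$ by invoking Proposition \ref{prop:reg_wyner_cts}. Throughout I take $(X,Y)$ to be zero-mean jointly Gaussian with covariance $\bK$ in \eqref{eqn:cov_mat} and $\rho \in [0,1)$, which entails no loss of generality.

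For the achievability bound, I would exhibit an explicit jointly Gaussian feasible triple: let $W \sim \calN(0,\rho)$ and, independently, $N_1, N_2 \sim \calN(0, 1-\rho)$ be mutually independent; set $X := W + N_1$ and $Y := W + N_2$. This triple satisfies the Markov condition $X-W-Y$ and reproduces the target marginal since $\var(X) = \var(Y) = 1$ and $\cov(X,Y) = \var(W) = \rho$. A direct Gaussian computation gives $h(X,Y) = \tfrac{1}{2}\log((2\pi e)^2(1-\rho^2))$ and $h(XY|W) = h(X|W) + h(Y|W) = \log(2\pi e(1-\rho))$, so
\begin{equation}
I(XY;W) = \tfrac{1}{2}\log\tfrac{1-\rho^2}{(1-\rho)^2} = \tfrac{1}{2}\log\tfrac{1+\rho}{1-\rho},
\end{equation}
which is the required upper bound.

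The main obstacle is the matching converse. Let $P_WP_{X|W}P_{Y|W}$ be any feasible law, and introduce $A := \bbE[X|W]$, $B := \bbE[Y|W]$, $a := \var(A)$, $b := \var(B)$. By the law of total variance, $\bbE[\var(X|W)] = 1-a$ and $\bbE[\var(Y|W)] = 1-b$, so $a,b \in [0,1]$. The Markov condition yields $h(XY|W) = h(X|W) + h(Y|W)$; the Gaussian maximum-entropy principle applied pointwise in $w$, followed by Jensen's inequality for the concave logarithm, gives $h(X|W) \le \tfrac{1}{2}\log(2\pi e(1-a))$, with the analogous estimate for $Y$. Combined with $h(X,Y) = \tfrac{1}{2}\log((2\pi e)^2(1-\rho^2))$, these bounds yield
\begin{equation}
I(XY;W) \;\ge\; \tfrac{1}{2}\log\tfrac{1-\rho^2}{(1-a)(1-b)}.
\end{equation}

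To finish, I would uniformly bound $(1-a)(1-b)$ under the constraint coming from the target covariance. Conditional independence implies $\rho = \cov(X,Y) = \bbE[AB]$, and since $A,B$ are zero-mean, the Cauchy--Schwarz inequality gives $ab \ge \rho^2$. Writing $c := ab$, the AM--GM inequality gives $a+b \ge 2\sqrt{c}$, so $(1-a)(1-b) = 1-(a+b)+c \le (1-\sqrt{c})^2 \le (1-\rho)^2$, where the last step uses $\sqrt{c} \ge \rho$ and $\sqrt{c},\rho \in [0,1]$. Substituting back produces $I(XY;W) \ge \tfrac{1}{2}\log\tfrac{1+\rho}{1-\rho}$ uniformly over feasible triples, matching the achievability. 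The identification $T(\pi_{XY}) = C_{\Wyner}(\pi_{XY})$ then follows from Proposition \ref{prop:reg_wyner_cts}, whose hypotheses are easily verified in the Gaussian setting: the density is log-concave, $I(X;Y) = -\tfrac{1}{2}\log(1-\rho^2) < \infty$, the achieving Gaussian triple has finite $D_{1+s}(P_{X|W}P_{Y|W}\|P_{XY}|P_W)$ for small $s > 0$, and the tail condition $\epsilon_d\log(d\kappa_d) \to 0$ holds by the Gaussian tail estimate $\epsilon_d = O(e^{-d^2/4})$ together with $\kappa_d = O(d)$.
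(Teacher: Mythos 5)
Your proof is correct, and the parts the paper does sketch line up with your argument. The paper exhibits the same achieving test channel (your $W\sim\calN(0,\rho)$, $X=W+N_1$, $Y=W+N_2$ is simply a rescaling of the paper's $W\sim\calN(0,1)$, $X=\sqrt{\rho}W+\sqrt{1-\rho}N_1$, $Y=\sqrt{\rho}W+\sqrt{1-\rho}N_2$), and the identification of $T(\pi_{XY})$ with the single-letter expression via Proposition~\ref{prop:reg_wyner_cts} is done as you describe, including the verification of log-concavity, $I(X;Y)<\infty$, and $\epsilon_d\log(d\kappa_d)\to 0$.

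Where your proposal adds something is the matching converse, which the paper does not reproduce (it attributes the evaluation of $C_{\Wyner}$ for jointly Gaussian sources to \citet{XuLiuChen} and \citet{YuTan2020_exact}). Your argument is sound and tight: the law of total variance gives $\bbE[\var(X|W)]=1-a$, $\bbE[\var(Y|W)]=1-b$; the Gaussian max-entropy principle applied per $w$ followed by Jensen on $\log$ bounds $h(X|W)\le\tfrac12\log\big(2\pi e(1-a)\big)$ and similarly for $Y$, so $I(XY;W)\ge\tfrac12\log\tfrac{1-\rho^2}{(1-a)(1-b)}$; conditional independence makes $\cov(X,Y|W)=0$ so $\rho=\bbE[AB]$, Cauchy--Schwarz forces $ab\ge\rho^2$, and the AM--GM chain $(1-a)(1-b)\le(1-\sqrt{ab})^2\le(1-\rho)^2$ closes the bound. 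Degenerate cases ($a=1$ or $b=1$) are handled automatically because the entropy bound then gives $I(XY;W)=+\infty$. This is the standard (and apparently the only known elementary) route to the Gaussian Wyner common information converse, so your proof is both correct and complete where the paper elides the details.
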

\begin{figure}
\centering
\includegraphics[width = .8\columnwidth]{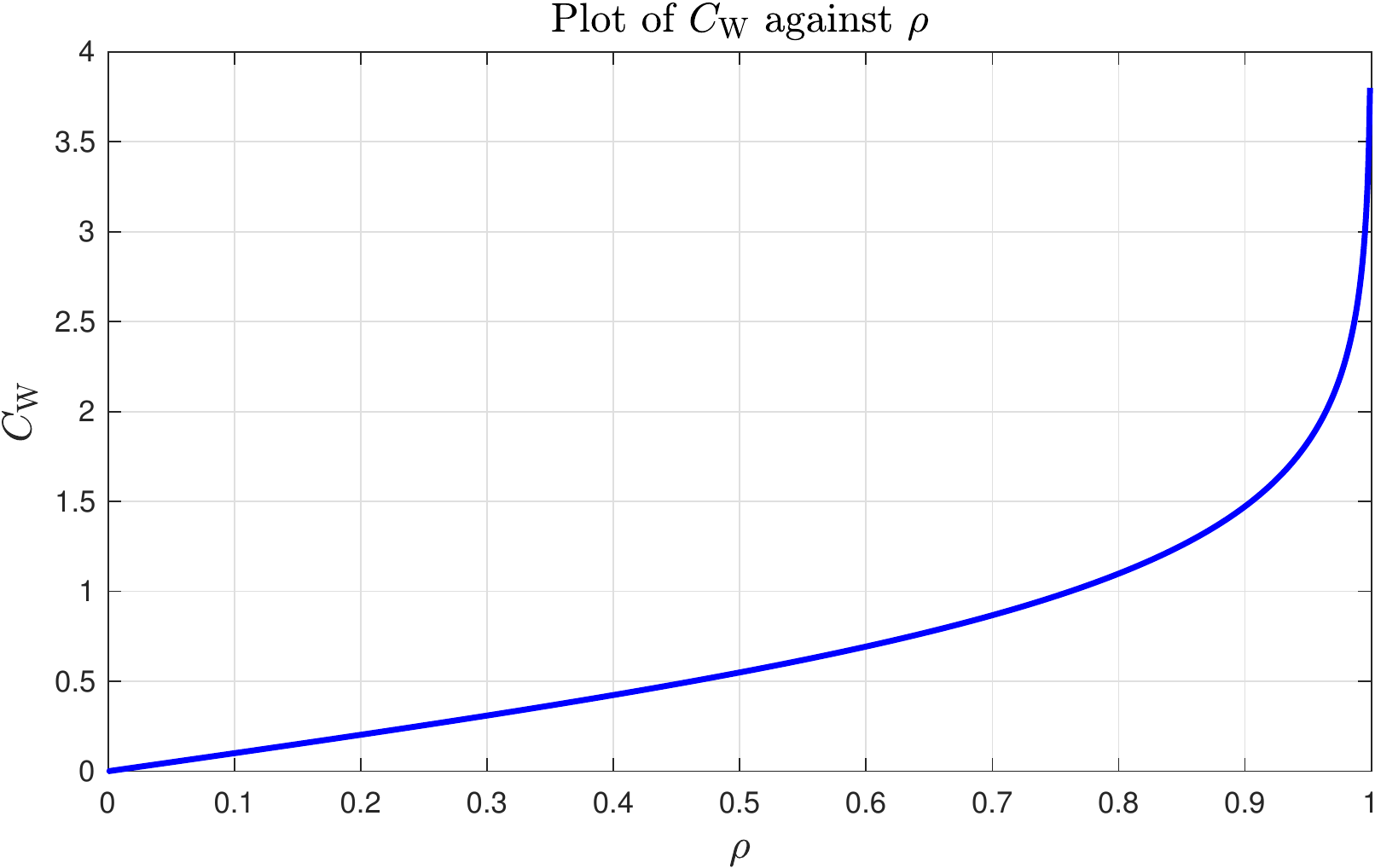}
\caption{Plot  of Wyner's common information for the jointly Gaussian source}
\label{fig:wyner_gauss}
\end{figure}

This function is plotted in Fig.~\ref{fig:wyner_gauss} and confirms our intuition about the limiting cases $\rho\downarrow 0$ and $\rho\uparrow 1$. Note that for continuous random variables, Wyner's common information $C_{\Wyner}(\pi_{XY})$ can increase without bound but for discrete random variables $C_{\Wyner}(\pi_{XY})$ is clearly overbounded (by $\log(|\calX||\calY|)$). 

The test channels that achieve the infimum in~\eqref{eqn:CWyner_inf} for jointly Gaussian sources are also Gaussian. Indeed, the optimum $P_WP_{X|W}P_{Y|W}$  takes the form
\begin{equation}
X=\sqrt{\rho}\, W+\sqrt{1-\rho}\, N_1\quad\mbox{and}\quad Y=\sqrt{\rho}\, W+\sqrt{1-\rho}\, N_2,
\end{equation}
where $W$, $N_1$ and $N_2$ are independent standard Gaussian random variables. This does not come as a surprise in view of the optimum  common random variable and test channels for the DSBS; see Fig.~\ref{fig:dsbs}. Note that with this choice of test channels, and $0\le s\le \sqrt{ \frac{1+\rho}{2\rho} }$, 
\begin{equation}
D_{1+s}(P_{X|W}P_{Y|W}\|P_{XY}|P_W) =\frac{1}{2}\log\frac{1+\rho}{1-\rho}-\frac{1}{2s}\log\bigg(1- \frac{2s^2\rho}{1+\rho}\bigg).
\end{equation}
Hence, the first condition of Proposition~\ref{prop:reg_wyner_cts} is satisfied. It is also easy to verify by straightforward, albeit tedious, calculus that the second condition is satisfied, so all inequalities in~\eqref{eqn:wyner_ct_ineq} are equalities. 
 
\section{Generalizations and Applications}  \label{sec:gen_wyner}
We conclude this section by briefly mentioning some extensions of Wyner's common information and its applications that we do not discuss further in the monograph.  This list is by no means exhaustive and serves as a teaser for the reader to explore the many   generalizations of this useful quantity. 

\citet{LiuXuChen10} extended  Wyner's common information for two random variables to a quantity representing the common information among $N$ random variables, namely,
\begin{equation}
C_{\Wyner}( \pi_{X_1 X_2  \ldots  X_N}) = \min \,  I(X_1  X_2  \ldots  X_N ; W), \label{eqn:CWyner_N}
\end{equation} 
where the minimum is over all joint distributions $P_W\prod_{i=1}^N P_{X_i|W}$ such that the marginal  $P_{X_1 \ldots  X_N}$ equals   the target distribution $\pi_{X_1 \ldots  X_N}$.
This has the same operational interpretation in terms of distributed simulation of random variables and the Gray--Wyner  network with~$N$ decoders and $N+1$ encoders.  \citet{cuff13} considered a distributed channel synthesis problem and showed that in the absence of any shared common randomness between the encoder and decoder, the minimum rate required to synthesize a channel is exactly Wyner's common information. At the other extreme, if the amount of shared common randomness is sufficiently large, the rate required is the mutual information. We revisit the channel synthesis problem in Section~\ref{ch:ecs}. Recently, motivated by problems in caching, \citet{GastparSuha} found an operational interpretation of the following {\em relaxed} version of Wyner's common information
\begin{equation}
C_{\Wyner}^{(\delta)}( \pi_{XY})= \min_{P_{WXY}:  P_{XY}=\pi_{XY}, I(X;Y|W)\le\delta} \,  I(XY;W), \label{eqn:relWyn}
\end{equation}
which is parametrized by $\delta\ge 0$. Notice that if $\delta =0$, this quantity particularizes to the usual Wyner's common information as the constraint $I(X;Y|W)\le\delta$ reduces to the Markovity constraint $X-W-Y$. In another recent work, \citet{Graczyk} defined a conditional version of Wyner's common information 
\begin{equation}
C_{\Wyner} ( \pi_{XY|Z}|\pi_Z)=\min_{  P_{WZ}P_{X|WZ}P_{Y|WZ}: P_{XYZ}=\pi_{XYZ}}I(XY;W|Z),\!
\end{equation}
 which has obvious operational interpretations in terms of the distributed source simulation and Gray--Wyner problems when the terminals have access to correlated side-information $Z^n\sim\pi_Z^n$. The same authors also studied a quantity known as  the {\em relevant common information}. 
\begin{equation}
C_{\mathrm{Rel}}( \pi_{XY|S}\to \pi_S ) = \min_{\substack{P_{WXYS}: P_{XYS}=\pi_{XYS},\\ X-W-Y, \, S-(X,Y)-W}}\,  I(S;W),
\end{equation}
where the minimization is over all tuples of random variables $(X,Y,S,W)$ such that the marginal of $(X,Y,S)$ matches the given $\pi_{XY|S}\pi_S$, $X-W-Y$ and $S-(X,Y)-W$. As can be seen from the    two Markov chains, $C_{\mathrm{Rel}}( \pi_{XY|S}\to \pi_S )$ represents the common information in $(X,Y)$ that is {\em relevant} to a correlated random variable $S$. It has the interesting operational interpretation as the rate of the common randomness required
at two terminals to---through their inputs---strongly coordinate the output of a two-user multiple-access channel (MAC) according to a target distribution $\pi_S$.

\citet{tyagi2013common} introduced the notion of {\em $r$-interactive common information}, which is a variant of Wyner's common information. This quantity characterizes the minimum overall rate of interactive communication  required to generate a maximum rate secret key in an interactive manner between two parties.

Extending the seminal work of \citet{MA-Niesen} on the information-theoretic limits of caching, \citet{WangLimGastpar} formulated another caching problem from an information-theoretic perspective in which users' requests change over time. They cast the problem as a multi-terminal lossless source coding problem with side-information. For the $N$-user scenario, \citet{WangLimGastpar}  showed that the optimal caching strategy is closely related to $C_{\Wyner}( \pi_{X_1 X_2  \ldots  X_N})$ in~\eqref{eqn:CWyner_N}, which represents Wyner's common information for $N$ dependent random variables.

\chapter{G\'acs--K\"orner--Witsenhausen's Common Information} \label{ch:gkw}
As mentioned at the start of Section~\ref{ch:wynerCI}, there are two well-known
notions of common information, the first  of which---Wyner's common information---has already been discussed in detail in 
Section~\ref{ch:wynerCI}. In this section,
we introduce the other classical notion of common information, namely, G\'acs--K\"orner--Witsenhausen's common
information. Recall that in the definition of Wyner's common information,
a  \emph{common} or   {\em shared source of randomness} $M_{n}$ is used to generate
a pair of random vectors $X^{n}$ and $Y^{n}$ in a distributed manner
such that the joint distribution of $(X^{n},Y^{n})$ is close to a
target product distribution $\pi_{XY}^{n}$. We now consider a counterpart
of this problem, illustrated in Fig.~\ref{fig:GKWCI}, in which a pair of random vectors $(X^{n},Y^{n})\sim\pi_{XY}^{n}$ is given, random variables $U=f(X^n)$ and $V=g(Y^n)$  
are to be extracted from $X^{n}$ and $Y^{n}$ individually using functions~$f_n$ and~$g_n$, and these random variables, called {\em common randomnesses},  should be {\em almost identical}.  This setting
was first considered by G\'acs and K\"orner  in their celebrated paper \cite{gacs1973common}  in which 
they defined the common information between $X$ and $Y$, jointly distributed as $\pi_{XY}$,
as the maximum information rate of the common randomness $U$ or,  equivalently, $V$.
This notion of common information was later coined \emph{G\'acs--K\"orner--Witsenhausen's} or \emph{GKW's common information}. In fact, G\'acs and K\"orner \cite{gacs1973common} were the first to investigate the notion of common information in 1973,
prior to Wyner's work \cite{WynerCI} in 1975. 

In this section, we review GKW's common information.
In Section~\ref{sec:GKW_CI}, we introduce the distributed randomness
extraction system, and define GKW's common
information in the context of this system. In Section~\ref{sec:property_GKW},  we introduce several properties of GKW's
common information. We also mention some probability- and graph-theoretic 
interpretations of GKW's common information.
We verify that GKW's common information is
zero for the DSBS and also for bivariate Gaussian sources; this observation motivates Part~\ref{part:three} of the monograph. In Section~\ref{sec:GW_system},
we introduce an operational interpretation of GKW's
common information in the context of the Gray--Wyner lossless source coding system~\cite{GrayWyner}. GKW's
common information turns out to be the maximum common rate under some 
conditions on the sums of the private and common rates of the messages. In Section~\ref{sec:input_dist}, we discuss an operational interpretation of GKW's
common information due to the present authors that is not too well-known. Specifically, we relate it to the channel capacity in which the input
distribution is fixed to be a given product distribution. 
Finally, we discuss some extensions and applications in Section~\ref{sec:gen_GKW}. 

\section{Distributed Randomness Extraction}
\label{sec:GKW_CI}

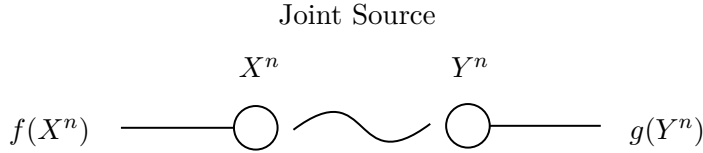
\begin{figure}
\centering

\tikzset{every picture/.style={line width=0.75pt}} 

\begin{tikzpicture}[x=0.75pt,y=0.75pt,yscale=-1,xscale=1]

\draw   (336.01,326) .. controls (336.01,319.93) and (340.93,315.01) .. (347,315.01) .. controls (353.08,315.01) and (358,319.93) .. (358,326) .. controls (358,332.08) and (353.08,337) .. (347,337) .. controls (340.93,337) and (336.01,332.08) .. (336.01,326) -- cycle ;
\draw   (229.01,327) .. controls (229.01,320.93) and (233.93,316.01) .. (240,316.01) .. controls (246.08,316.01) and (251,320.93) .. (251,327) .. controls (251,333.08) and (246.08,338) .. (240,338) .. controls (233.93,338) and (229.01,333.08) .. (229.01,327) -- cycle ;
\draw    (229.01,327) -- (213.8,327) -- (172,327) ;
\draw    (358,326) -- (414,326) ;
\draw    (259,328) .. controls (299,298) and (288,355) .. (328,325) ;

\draw (230,290) node [anchor=north west][inner sep=0.75pt]    {$X^{n}$};
\draw (337,290) node [anchor=north west][inner sep=0.75pt]    {$Y^{n}$};
\draw (114,320) node [anchor=north west][inner sep=0.75pt]    {$f ( X^{n} )$};
\draw (427,320) node [anchor=north west][inner sep=0.75pt]    {$g( Y^{n} )$};
\draw (250,263) node [anchor=north west][inner sep=0.75pt]   [align=left] {Joint Source};
\end{tikzpicture}
\caption{\label{fig:GKWCI}The distributed randomness extraction problem}
\end{figure}

Consider the distributed  randomness extraction problem illustrated in Fig. \ref{fig:GKWCI}. For a
joint source  $(X,Y)\sim\pi_{XY}$, we use a pair
of functions $f$ and $g$, respectively acting on $X^{n}$ and
$Y^{n}$, to generate random variables $f(X^n)$ and $g(Y^n)$. Our goal is
to ensure that $f(X^n)$ and $g(Y^n)$ are equal with high probability  and, at the same time, to maximize the information rate
of $f(X^n)$ or, equivalently,  $g(Y^n)$. Formally, we define distributed extraction codes
and the $\eps$-common information as follows. These definitions are due to \citet{csiszar2000common}; we discuss the original formulation by \citet{gacs1973common} in Remark~\ref{rmk:gk}.

\begin{definition} An {\em $X$-sided $(n,R)$-distributed
extraction code} consists of a pair of (deterministic) functions\footnote{Without loss of generality, we can set the codomains of $f$ and $g$
to be the set of natural numbers $\mathbb{N}$, i.e., $f:\mathcal{X}^{n}\to\mathbb{N}$ and $g:\mathcal{Y}^{n}\to\mathbb{N}$.} $(f,g)$  defined respectively on $\mathcal{X}^{n}$ and $\mathcal{Y}^{n}$
such that 
\begin{equation}
\frac{1}{n}H\big(f(X^{n})\big)\geq R.\label{eq:GKW-2}
\end{equation}
A \emph{$Y$-sided $(n,R)$-distributed extraction code} is defined 
similarly, but with \eqref{eq:GKW-2} replaced by 
$
\frac{1}{n}H(g(Y^{n}))\geq R$.
\end{definition} 

\begin{definition} \label{def:GKW_CI} Fix $\eps \in (0,1)$.  The \emph{maximal $X$-sided $\eps$-error extraction rate $S_{\eps}^{(X)}(\pi_{XY})$} between a pair of random
variables $(X,Y)\sim\pi_{XY}$ is defined as  the supremum of all
rates $R$ such that there exists a sequence of $X$-sided $(n,R)$-distributed
extraction codes $\{(f_{n},g_{n})\}_{n\in\bbN}$ satisfying   
\begin{equation}
\Pr\big(f_{n}(X^{n})\neq g_{n}(Y^{n})\big) \le\eps,  \label{eqn:asymp_prob_disagree}
\end{equation}
 for all sufficiently large $n$, 
where $(X^{n},Y^{n})\sim\pi_{XY}^{n}$.
The  \emph{maximal $Y$-sided $\varepsilon$-error extraction rate $S_{\eps}^{(Y)}(\pi_{XY})$
}between $(X,Y)\sim\pi_{XY}$ is defined analogously. \end{definition}

One can easily verify that the maximal $X$- and $Y$-sided  $\varepsilon$-error extraction rates
  do not differ significantly in the limit as $n\to\infty$ and $\eps\downarrow0$.
This is because, by Fano's inequality \cite[Section~2.1]{elgamal}, 
\begin{equation}
1+\eps\log|\supp(f_{n}(X^{n}))|\ge H\big(f_{n}(X^{n}) | g_{n}(Y^{n})\big).\label{eq:GKWFano}
\end{equation}
Note that $|\supp(f_{n}(X^{n}))|\le|\mathcal{X}|^{n}$ since $f_{n}$
is a deterministic function. Therefore, 
\begin{equation}
\lim_{\eps\downarrow0}\lim_{n\to\infty}\frac{1}{n}H\big( f_{n}(X^{n})|g_{n}(Y^{n})\big)=0.
\end{equation}
By symmetry, it also
holds that 
\begin{equation}
\lim_{\eps\downarrow0}\lim_{n\to\infty}\frac{1}{n}H\big( g_{n}(Y^{n})|f_{n}(X^{n}) \big)=0.
\end{equation}
Combining
these two limits yields that 
\begin{align}
&\hspace{-.25in}\lim_{\eps\downarrow0}\lim_{n\to\infty}\frac{1}{n} \big(H(f_{n}(X^{n}))-\frac{1}{n}H(g_{n}(Y^{n})\big)\nn\\*
&\hspace{-.25in}=\lim_{\eps\downarrow0}\lim_{n\to\infty} \Big(  H\big( f_{n}(X^{n})|g_{n}(Y^{n})\big)- H\big( g_{n}(X^{n})|f_{n}(Y^{n})\big)\Big)=0. \label{eqn:diff_ents}
\end{align}
 The exact expressions
for the  maximal $X$- and $Y$-sided $\eps$-extraction rate
as $\eps\downarrow0$ are given by G\'acs and K\"orner \cite{gacs1973common}.

\begin{theorem}\label{thm:GKW_CI} For a joint  source  $(X,Y)\sim\pi_{XY}$,
it holds that 
\begin{equation}
\lim_{\eps\downarrow0}S_{\eps}^{(X)}(\pi_{XY})=\lim_{\eps\downarrow0}S_{\eps}^{(Y)}(\pi_{XY})=C_{\mathrm{GKW}}(\pi_{XY}),\label{eq:GKW-1-1}
\end{equation}
where 
\begin{equation}
C_{\mathrm{GKW}}(\pi_{XY}):=\max_{f,g:f(X)=g(Y)}H \big(f(X)\big),\label{eq:GKWGKW_CI}
\end{equation}
and where the maximization   is taken over all pairs of
deterministic functions $(f,g)$  defined respectively on $\mathcal{X}$
and $\mathcal{Y}$ such that $f(X)=g(Y)$ with $\pi_{XY}$-probability one. \end{theorem}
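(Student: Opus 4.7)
My plan is to prove Theorem \ref{thm:GKW_CI} by establishing matching achievability and converse bounds for $S_\eps^{(X)}$; the equality with $S_\eps^{(Y)}$ in the limit is already supplied by \eqref{eqn:diff_ents}.

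For achievability, let $(f^*,g^*)$ attain the maximum in \eqref{eq:GKWGKW_CI}, so that $f^*(X)=g^*(Y)$ almost surely and $H(f^*(X))=C_{\mathrm{GKW}}(\pi_{XY})$. I would use the letter-by-letter extraction code $f_n(x^n):=(f^*(x_1),\ldots,f^*(x_n))$ and $g_n(y^n):=(g^*(y_1),\ldots,g^*(y_n))$. Then $f_n(X^n)=g_n(Y^n)$ with probability one, so \eqref{eqn:asymp_prob_disagree} holds for every $\eps\in(0,1)$, and the achieved rate is exactly $\frac{1}{n}H(f_n(X^n))=H(f^*(X))=C_{\mathrm{GKW}}(\pi_{XY})$.

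For the converse, set $U_n:=f_n(X^n)$ and $V_n:=g_n(Y^n)$. Fano's inequality \eqref{eq:GKWFano} and symmetry give $\frac{1}{n}H(U_n)\le\frac{1}{n}I(U_n;V_n)+\delta_\eps$ with $\delta_\eps\downarrow 0$ as $\eps\downarrow 0$. I would next introduce the ``maximal common part'': let $K_X:\mathcal{X}\to[m]$ and $K_Y:\mathcal{Y}\to[m]$ label the connected components of the bipartite graph on $\mathcal{X}\cup\mathcal{Y}$ with edges $\{(x,y):\pi_{XY}(x,y)>0\}$. Any $(f,g)$ feasible in \eqref{eq:GKWGKW_CI} must take a common value on every edge, hence must be constant on each connected component; therefore $f(X)$ is a function of $K_X(X)$, the pair $(K_X,K_Y)$ itself achieves the supremum, and $C_{\mathrm{GKW}}(\pi_{XY})=H(K_X(X))$. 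Writing $K^n:=(K_X(X_1),\ldots,K_X(X_n))$, which equals $(K_Y(Y_1),\ldots,K_Y(Y_n))$ almost surely, the chain rule gives $I(U_n;V_n)\le H(K^n)+I(U_n;V_n\mid K^n)=nC_{\mathrm{GKW}}(\pi_{XY})+I(U_n;V_n\mid K^n)$.

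The remaining task, and the main obstacle, is to prove $I(U_n;V_n\mid K^n)=o_\eps(n)$. Conditionally on $K^n=k^n$ the pair $(X^n,Y^n)$ is an independent product of per-component laws $\pi_{XY\mid K_X=k_i}$, each \emph{irreducible} in the sense that its support graph has a single connected component. The problem therefore reduces to a single-component lemma: for an irreducible source, any deterministic $(\tilde f_n,\tilde g_n)$ with $\Pr(\tilde f_n(X^n)\neq\tilde g_n(Y^n))\le\eps$ satisfies $\frac{1}{n}H(\tilde f_n(X^n))\to 0$ as $n\to\infty$ and $\eps\downarrow 0$. My plan is to prove this by a typicality/zig-zag argument: on the strongly typical set, irreducibility allows any two typical sequences $x^n,(x')^n$ of the same type to be connected through a short chain $x^n\leftrightarrow y^n\leftrightarrow (x')^n\leftrightarrow\cdots$ of typical pairs. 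Each link forces $\tilde f_n$ and $\tilde g_n$ to agree across it, so after removing a vanishing-probability exceptional set (controlled by $\eps$ via a blowing-up/concentration estimate) $\tilde f_n$ is essentially constant on each type class, yielding the claimed $o(n)$ entropy bound. Combining this per-component bound with $H(K^n)=nC_{\mathrm{GKW}}(\pi_{XY})$ and then sending $\eps\downarrow 0$ completes the converse and establishes \eqref{eq:GKW-1-1}.
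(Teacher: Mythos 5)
Your achievability argument and the reduction of the $X$-sided to $Y$-sided case via \eqref{eqn:diff_ents} match the paper exactly. The converse is where you diverge. The paper invokes Lemma~\ref{lem:csiszar} (Csisz\'ar--Narayan's Lemma~1.1), which is proved by tensorizing Witsenhausen's conditional maximal correlation bound, and then applies Fano; your plan instead conditions on the common-part sequence $K^n$ and tries to show $I(U_n;V_n\mid K^n)=o_\eps(n)$ by a zig-zag/blowing-up argument for irreducible sources. That is a genuinely different proof strategy, closer in spirit to the original G\'acs--K\"orner and Ahlswede--G\'acs--K\"orner treatments than to the maximal-correlation route the paper takes.

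The problem is that the single-component lemma --- the claim that for an irreducible source any deterministic $(\tilde f_n,\tilde g_n)$ with disagreement probability at most $\eps$ has $\tfrac{1}{n}H(\tilde f_n(X^n))\to 0$ --- is where essentially all of the difficulty of the theorem lives, and your sketch does not actually establish it. Two specific issues. First, a bare zig-zag through typical pairs cannot control an $\eps$-error extraction: two typical $x^n,(x')^n$ of the same type differ in $\Theta(n)$ coordinates, so the chain $x^n\leftrightarrow y^n\leftrightarrow(x')^n\leftrightarrow\cdots$ has length $\Theta(n)$, and with a constant per-link failure probability $\eps$ the error accumulated along the chain is not vanishing. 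You gesture at a blowing-up/concentration estimate as the fix, and indeed the Margulis/Ahlswede--G\'acs--K\"orner blowing-up lemma is the standard way to avoid long chains (two sets of non-vanishing $\pi_X^n$-measure have $o(n)$-radius blowups that intersect, which is the real substitute for the zig-zag), but that step is the entire technical content and is not spelled out; as written, the argument asserts the conclusion rather than proving it. Second, a smaller gap: conditioning on $K^n=k^n$ produces a product of irreducible component laws that are \emph{non-identically} distributed across coordinates, so the single-component lemma as you state it (for an i.i.d.\ irreducible source) does not directly apply and needs to be extended to finite mixtures of types. Both points are fixable, but as it stands the converse rests on an unproven core lemma, which is exactly the role Lemma~\ref{lem:csiszar} plays in the paper's proof.
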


In the literature, for example in \citet{elgamal}, \emph{$C_{\mathrm{GKW}}(\pi_{XY})$ } 
is known as  \emph{GKW's common information}. Theorem~\ref{thm:GKW_CI} says that the  maximal $X$- and $Y$-sided $\eps$-error extraction rates are equal to GKW's common information, so in the following, we will use these terminologies interchangeably. 
It is clear that   the objective  function in the maximization in \eqref{eq:GKWGKW_CI}
can be replaced by $H(g(Y))$ since $f(X)$ and $g(Y)$ are constrained to be equal almost surely. Roughly speaking, the quantity $C_{\mathrm{GKW}}(\pi_{XY})$
 corresponds to a single-letter version (i.e., $n=1$ version) of the maximal $X$-  or $Y$-sided $0$-error extraction rates (defined formally in Definition~\ref{def:gkw_0error}), in the sense that $C_{\mathrm{GKW}}(\pi_{XY})$
is equal to the supremum of all rates $R$ such that $H(f(X))\geq R$
and $\Pr\big(f(X)\neq g(Y)\big)=0$. 

\begin{proof}[Proof of Theorem \ref{thm:GKW_CI}] By symmetry, it clearly
suffices to prove that $\lim_{\eps\downarrow0}S_{\eps}^{(X)}(\pi_{XY})=C_{\mathrm{GKW}}(\pi_{XY})$.
We first prove  that $\lim_{\eps\downarrow0}\! S_{\eps}^{(X)}(\pi_{XY})\!\ge \! C_{\mathrm{GKW}}(\pi_{XY})$, the achievability part. 
Let $f^{*} : \calX\to\calU$ and $ g^{*}:\calY\to\calV$ be an  optimal pair of functions that attains the
maximum in~\eqref{eq:GKWGKW_CI}, where $\calU$ and $\calV$ are two fixed  sets (that can  be assumed to be the same). Then, let $f_{n}:x^n\in \calX^n \mapsto  (f^{*}(x_1),f^{*}(x_2),\ldots,f^{*}(x_n))\in \calU^n$
and $g_{n}:y^n\in \calY^n \mapsto  (g^{*}(y_1),g^{*}(y_2),\ldots,g^{*}(y_n))\in \calV^n$.  Then,  by the mutual independence of $X_1, X_2,\ldots, X_n$, 
\begin{equation}
\frac{1}{n}H(f_{n}(X^{n}))=H(f^{*}(X))=C_{\mathrm{GKW}}(\pi_{XY}) \label{eq:HHC}
\end{equation}
and 
\begin{equation}
\Pr(f_{n}(X^{n})=g_{n}(Y^{n}))=1. \label{eqn:agreement1}
\end{equation}
Therefore, $S_{\eps}^{(X)}(\pi_{XY})\ge C_{\mathrm{GKW}}(\pi_{XY})$
for any $\eps\in (0,1)$. 

We next prove $\lim_{\eps\downarrow0} S_{\eps}^{(X)}(\pi_{XY})\le C_{\mathrm{GKW}}(\pi_{XY})$, the converse part.
The proof is based on the following lemma due to \citet[Lemma~1.1]{csiszar2000common}. As assumed in the achievability part, let $(f^{*},g^{*})$ be
an optimal pair of functions attaining the maximization in \eqref{eq:GKWGKW_CI}.
Let $W=f^{*}(X)=g^{*}(Y) \in \calW$; this random variable is called the \emph{common part} of $(X,Y)\sim\pi_{XY}$. 
\begin{lemma} \label{lem:csiszar}
 For $(X^{n},Y^{n})\sim\pi_{XY}^{n}$, let $U$ and $V$ be two random variables
such that $U-X^{n}-Y^{n}-V$ and 
\begin{equation}
\Pr ( U\neq V ) \leq\eps\label{eq:GKWUV}
\end{equation}
for $\eps>0$. Then 
\begin{equation}
\min_{h:\mathcal{W}^{n}\to\mathcal{U}}\Pr\big(U\neq h(W^{n})\big)\leq\delta(\eps),
\end{equation}
where $\delta:(0,\infty)\to (0,\infty)$ is a function that  only depends
on $\pi_{XY}$, is independent of $n$, and has the property that   $\delta(\eps)\downarrow0$
as $\eps\downarrow0$. \end{lemma}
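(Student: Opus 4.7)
The plan is to combine structural properties of the GKW common part with the tensorization of maximal correlation, in order to obtain a quantitative bound whose constant does not depend on the blocklength $n$. I would organize the argument in three steps.

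\emph{Step 1: Structure of~$W$.} I would first show that for each $w\in\calW$ the bipartite support graph
\begin{equation*}
G_{w}:=\big\{(x,y)\in(f^{*})^{-1}(w)\times(g^{*})^{-1}(w):\pi_{XY}(x,y)>0\big\}
\end{equation*}
is connected. If some $G_{w}$ were disconnected, splitting $w$ according to its connected components would yield a strictly finer common function and contradict the maximality of $H(f^{*}(X))$ in \eqref{eq:GKWGKW_CI}. Connectedness of $G_{w}$ implies that the conditional law $\pi_{XY\mid W=w}$ itself has trivial GKW common part. Using the classical equivalence that $C_{\mathrm{GKW}}(\pi)=0$ if and only if the Hirschfeld--Gebelein--R\'enyi maximal correlation $\rho_{\mathrm{m}}(X;Y)<1$, I would set $\rho^{*}:=\max_{w\in\calW}\rho_{\mathrm{m}}(X;Y\mid W=w)\in[0,1)$. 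Crucially, $\rho^{*}$ depends only on $\pi_{XY}$.

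\emph{Step 2: Conditioning on $W^{n}$ and tensorization.} Because $W^{n}$ is a deterministic function of both $X^{n}$ and $Y^{n}$, the Markov chain $U-X^{n}-Y^{n}-V$ persists conditionally on $W^{n}=w^{n}$. Under this conditioning, $(X^{n},Y^{n})$ has the product law $\prod_{i=1}^{n}\pi_{XY\mid W=w_{i}}$, so Witsenhausen's tensorization of the maximal correlation gives
\begin{equation*}
\rho_{\mathrm{m}}(X^{n};Y^{n}\mid W^{n}=w^{n})=\max_{i\in[n]}\rho_{\mathrm{m}}(X;Y\mid W=w_{i})\le\rho^{*}.
\end{equation*}
Applying the strong data processing inequality for maximal correlation to the chain $U-X^{n}-Y^{n}-V$ under the same conditioning yields the uniform bound $\rho_{\mathrm{m}}(U;V\mid W^{n}=w^{n})\le\rho^{*}$, independent of both $n$ and~$w^{n}$.

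\emph{Step 3: From contraction to a decoding bound.} Writing $p_{w^{n}}(u):=P_{U\mid W^{n}}(u\mid w^{n})$ and $\eta(w^{n}):=\Pr(U\ne V\mid W^{n}=w^{n})$, I would test the variational formula for $\rho_{\mathrm{m}}(U;V\mid W^{n}=w^{n})$ with the indicator $f(u)=\mathbbm{1}[u\in A]$ for an arbitrary subset $A\subseteq\calU$. A short computation using $P(U\in A,V\in A\mid W^{n}=w^{n})\ge p_{w^{n}}(A)-\eta(w^{n})$ together with $\rho_{\mathrm{m}}\le\rho^{*}$ would produce
\begin{equation*}
p_{w^{n}}(A)\big(1-p_{w^{n}}(A)\big)\le\frac{2\,\eta(w^{n})}{1-\rho^{*}}\qquad\text{for every }A\subseteq\calU.
\end{equation*}
Applying this to the ordered partial sums of $p_{w^{n}}$ (a pigeonhole argument) forces a unique mode $h_{n}(w^{n})\in\calU$ satisfying $p_{w^{n}}(h_{n}(w^{n}))\ge 1-\tfrac{2\eta(w^{n})}{1-\rho^{*}}$, as long as $\eta(w^{n})<(1-\rho^{*})/4$. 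Averaging over $W^{n}$, handling the complementary regime by the trivial bound of $1$ together with Markov's inequality, and using $\bbE[\eta(W^{n})]=\Pr(U\ne V)\le\varepsilon$ gives
\begin{equation*}
\Pr\big(U\ne h_{n}(W^{n})\big)\le\delta(\varepsilon),\qquad\delta(\varepsilon)=O\!\left(\frac{\varepsilon}{1-\rho^{*}}\right),
\end{equation*}
which is the desired bound, with $\delta$ depending only on $\pi_{XY}$ through $\rho^{*}$ and with $\delta(\varepsilon)\downarrow 0$ as $\varepsilon\downarrow 0$.

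\emph{Main obstacle.} The delicate point throughout is the uniformity in~$n$. Either iterating the single-letter argument across coordinates, or directly applying Fano's inequality to $(U,V)$, produces constants that degrade like $\alpha^{-n}$ with $\alpha:=\min_{(x,y)\in\supp\pi_{XY}}\pi_{XY}(x,y)$ or like $\varepsilon\log|\calX|^{n}$. The tensorization of maximal correlation together with the strong data processing inequality is exactly what converts the single-letter structural fact $\rho^{*}<1$ into a quantitative multi-letter bound with no $n$-dependence. A secondary technicality is that the conditional marginals $P_{U\mid W^{n}}(\cdot\mid w^{n})$ and $P_{V\mid W^{n}}(\cdot\mid w^{n})$ need not coincide; however, $|P_{U\mid W^{n}}-P_{V\mid W^{n}}|\le\eta(w^{n})$ by the coupling bound, so this asymmetry costs only a constant factor in the calculations of Step~3.
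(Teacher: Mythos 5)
Your proposal is correct and follows essentially the same route the paper attributes to Csisz\'ar and Narayan: conditioning on the common part $W^n$, tensorizing the conditional maximal correlation (a result of Witsenhausen), applying the data processing inequality along $U-X^n-Y^n-V$, and then translating the resulting uniform bound $\rho_{\mathrm{m}}(U;V\mid W^n)\le\rho^*<1$ into the existence of a dominant mode. The only minor blemishes are terminological (you invoke the ``strong'' data processing inequality when the ordinary DPI for maximal correlation suffices) and a small unimportant constant discrepancy in Step~3, neither of which affects the argument.
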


This lemma is proven by the tensorization property of the \emph{conditional
maximal correlation} (the unconditional version of the maximal correlation was defined in \eqref{eq:mc}). It uses some results of \citet{witsenhausen1975sequences}, but we will not elaborate on it here; see \cite[Lemma 1.1]{csiszar2000common}. Using Lemma~\ref{lem:csiszar}, we know that $\min_{h_{n}}\Pr\left( f_{n}(X^{n})\neq h_{n}(W^{n})\right) \leq\delta(\eps)$,
where the minimization is taken over all functions $h_{n}$ defined
on $\mathcal{W}^{n}$. Similarly to \eqref{eq:GKWFano}, by Fano's inequality
\cite[Section~2.1]{elgamal}, for any function $h_{n} : \calW^n\to\calU$, 
\begin{equation}
1+\delta(\eps)\log|\supp(f_{n}(X^{n}))|\ge H\big(f_{n}(X^{n})|h_{n}(W^{n})\big).\label{eq:GKWFano-1}
\end{equation}
Following an argument similar to the one leading to \eqref{eqn:diff_ents}, we
have that 
\begin{equation}
\lim_{\eps\downarrow0}\lim_{n\to\infty}\frac{1}{n}\Big(H\big(f_{n}(X^{n})\big)- H\big(h_{n}(W^{n})\big) \Big)=0. \label{eqn:limlimfh}
\end{equation}
By combining \eqref{eqn:limlimfh} with  the fact that $H\big(h_n(W^n)\big)\le H(W^n)=nH(W)$,
\begin{align}
\lim_{\eps\downarrow0}  \limsup_{n\to\infty}\frac{1}{n}H\big(f_{n}(X^{n})\big) 
\le H(W)  =C_{\mathrm{GKW}}(\pi_{XY}),
\end{align}
which implies that $\lim_{\eps\downarrow0}S_{\eps}^{(X)}(\pi_{XY})\le C_{\mathrm{GKW}}(\pi_{XY})$.
\end{proof}

From the proof of Theorem~\ref{thm:GKW_CI}, and in particular  \eqref{eqn:agreement1}, we know that the constraint on the   probability
of disagreement $\Pr(f_{n}(X^{n})\neq g_{n}(Y^{n}))\le\eps$  (where $\eps\in(0,1)$) can
be strengthened  significantly to the zero-error version, i.e.,  
\begin{equation}
\Pr\big(f_{n}(X^{n})\neq g_{n}(Y^{n})\big)=0 \qquad\mbox{for all} \;\, n\in\bbN. \label{eqn:zeroerr} 
\end{equation}
\begin{definition} \label{def:gkw_0error}
The \emph{maximal $X$-sided (resp.\ $Y$-sided) $0$-error extraction rate $S_{\eps}^{(X)}(\pi_{XY})$} 
(resp.\ $\tilS_{0}^{(Y)}(\pi_{XY})$) is the supremum  of all rates $R$ such that there exists a sequence of $(n,R)$-distributed extraction codes $\{(f_n,g_n)\}_{n\in\bbN}$ such that \eqref{eqn:zeroerr} holds. 
\end{definition}
By definition, $\tilS_{0}^{(X)}(\pi_{XY})=\tilS_{0}^{(Y)}(\pi_{XY})$. Moreover, these  strengthened definitions are the same as 
the limiting values of  maximal $X$- and $Y$-sided $\eps$-error extraction rates  as $\eps \downarrow 0$, i.e., 
\begin{equation}
\tilS_0^{(X)}(\pi_{XY}) = \lim_{\eps\downarrow0}S_\eps^{(X)}(\pi_{XY})\quad \mbox{and}\quad\tilS_0^{(Y)}(\pi_{XY}) = \lim_{\eps\downarrow0}S_\eps^{(Y)}(\pi_{XY}).
\end{equation}
This is easy to see as, on one hand,  according to~\eqref{eqn:agreement1}, the functions~$f_n$ and~$g_n$, defined in the proof of Theorem~\ref{thm:GKW_CI},   satisfy the zero-error constraint. Hence, 
 $\tilS_{0}^{(X)}(\pi_{XY})\ge C_{\mathrm{GKW}}(\pi_{XY})$. 
On the other hand, observe that  the   maximal $X$-sided $\eps$-error extraction rate  is no larger than $ S_{\eps}^{(X)}(\pi_{XY})$ for any $\eps\in  (0,1)$, 
since  an error is allowed in the latter.  Combining this with Theorem  \ref{thm:GKW_CI}  yields that $\tilS_{0}^{(X)}(\pi_{XY}) \le C_{\mathrm{GKW}}(\pi_{XY})$. 
These observations are summarized in the following theorem.  

\begin{theorem} \label{thm:exactgkw} It holds that
\begin{equation}
\tilS_{0}^{(X)}(\pi_{XY})=\tilS_{0}^{(Y)}(\pi_{XY})=C_{\mathrm{GKW}}(\pi_{XY}).\label{eq:GKW-1-1-2}
\end{equation}
\end{theorem}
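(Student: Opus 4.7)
The plan is to leverage Theorem~\ref{thm:GKW_CI} and the fact that the achievability construction there already produces a zero-error code, so that the $0$-error rate is sandwiched between two quantities both equal to $C_{\mathrm{GKW}}(\pi_{XY})$. By symmetry, it suffices to treat the $X$-sided version; the $Y$-sided statement then follows by an identical argument, and the equality of the two sides is immediate.

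First I would establish the lower bound $\tilS_0^{(X)}(\pi_{XY}) \ge C_{\mathrm{GKW}}(\pi_{XY})$. Let $(f^*,g^*)$ be an optimal pair of single-letter functions attaining the maximum in~\eqref{eq:GKWGKW_CI}, so that $f^*(X)=g^*(Y)$ with $\pi_{XY}$-probability one and $H(f^*(X))=C_{\mathrm{GKW}}(\pi_{XY})$. Define $f_n$ and $g_n$ as the coordinatewise applications of $f^*$ and $g^*$, exactly as in the achievability argument for Theorem~\ref{thm:GKW_CI}. Then, as in~\eqref{eq:HHC} and~\eqref{eqn:agreement1}, $\frac{1}{n}H(f_n(X^n)) = C_{\mathrm{GKW}}(\pi_{XY})$ and $\Pr(f_n(X^n) \ne g_n(Y^n)) = 0$ for every $n \in \bbN$. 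This sequence of codes is therefore a valid $X$-sided zero-error distributed extraction code sequence at rate $C_{\mathrm{GKW}}(\pi_{XY})$, which gives the desired lower bound.

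For the upper bound $\tilS_0^{(X)}(\pi_{XY}) \le C_{\mathrm{GKW}}(\pi_{XY})$, I would simply note that any zero-error code trivially satisfies $\Pr(f_n(X^n) \ne g_n(Y^n)) \le \eps$ for every $\eps \in (0,1)$. Hence every achievable zero-error rate is also achievable in the $\eps$-error sense, so
\begin{equation}
\tilS_0^{(X)}(\pi_{XY}) \le S_\eps^{(X)}(\pi_{XY}) \qquad \text{for all } \eps \in (0,1).
\end{equation}
Taking $\eps \downarrow 0$ and invoking Theorem~\ref{thm:GKW_CI} gives $\tilS_0^{(X)}(\pi_{XY}) \le \lim_{\eps\downarrow 0} S_\eps^{(X)}(\pi_{XY}) = C_{\mathrm{GKW}}(\pi_{XY})$. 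Combining both bounds yields $\tilS_0^{(X)}(\pi_{XY}) = C_{\mathrm{GKW}}(\pi_{XY})$, and applying the same reasoning (using $g^*$ in the achievability and the $Y$-sided part of Theorem~\ref{thm:GKW_CI} in the converse) establishes $\tilS_0^{(Y)}(\pi_{XY}) = C_{\mathrm{GKW}}(\pi_{XY})$.

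There is essentially no obstacle here: the nontrivial work has already been done in Theorem~\ref{thm:GKW_CI}, whose converse relied on Lemma~\ref{lem:csiszar} and Fano's inequality. The only conceptual point to emphasize is that the coordinatewise construction used for achievability in Theorem~\ref{thm:GKW_CI} is \emph{already} a zero-error scheme, so no additional code design is needed to strengthen the $\eps$-error guarantee to a $0$-error one; the converse then comes for free from the monotonicity $\tilS_0^{(X)} \le S_\eps^{(X)}$.
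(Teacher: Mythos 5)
Your proof is correct and follows essentially the same route as the paper: the paper likewise observes that the coordinatewise scheme $(f_n,g_n)$ from the achievability part of Theorem~\ref{thm:GKW_CI} already has zero error (giving $\tilS_{0}^{(X)}\ge C_{\mathrm{GKW}}$) and that a zero-error code is in particular an $\eps$-error code (giving $\tilS_{0}^{(X)}\le \lim_{\eps\downarrow 0}S_{\eps}^{(X)}=C_{\mathrm{GKW}}$). The one small point worth making explicit, which the paper states outright, is that $\tilS_{0}^{(X)}(\pi_{XY})=\tilS_{0}^{(Y)}(\pi_{XY})$ holds \emph{by definition} in the zero-error case, since $f_n(X^n)=g_n(Y^n)$ with probability one forces $H(f_n(X^n))=H(g_n(Y^n))$; appealing to ``symmetry'' is a touch looser, as $\pi_{XY}$ need not be symmetric in $X$ and $Y$, but the conclusion is the same.
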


\begin{remark} \label{rmk:gk}
The formulation of GKW's common information as presented
in Definition~\ref{def:GKW_CI} was  introduced by \citet{csiszar2000common}. This is not the original
definition introduced in \citet{gacs1973common}. In G\'acs
and K\"orner's original formulation, instead of the normalized
entropy of the common part $W_n \in \calW_n$ of $X^n$ and $Y^n$ (cf.~\eqref{eq:GKW-2}), the information rate is measured in terms of the 
 \emph{exponent} of its alphabet size $\frac{1}{n}\log|\calW_n|$. 
 To ensure that the exponent of the alphabet size is an ``effective''
measure of the information rate, the distribution of the random variable
$f_{n}(X^{n})$ (and $g_{n}(Y^{n})$) is required to be close to the
uniform distribution on its alphabet. Hence, lossless source coding
is used in G\'acs
and K\"orner's setting to implement this requirement. Specifically, the juxtaposition  of $f_n$ with another function $\tilf_{n}$ on $\mathcal{X}^{n}$ 
is required to be an \emph{almost optimal} fixed-length lossless source
code for~$X^{n}$. A similar constraint was also imposed for the function
$g_{n}$. These force the outputs of $f_{n}$ and $g_{n}$ to be close
to uniform on $\calW_n$. 
Indeed, the formulation by \citet{gacs1973common} is analogous to fixed-length source coding~\cite{Shannon48} while the formulation by \citet{csiszar2000common} is  analogous to weak variable-length source coding as studied in~\citet{han2006weak} and~\citet{koga2005weak} 
among others. 

For their setting, G\'acs and K\"orner showed that the maximum asymptotic   exponent  $\liminf_{n\to\infty}\frac{1}{n}\log|\calW_n|$ under the asymptotic probability
of disagreement constraint in~\eqref{eqn:asymp_prob_disagree}
is   $C_{\mathrm{GKW}}(\pi_{XY})$ for all $\eps \in (0,1)$. Hence,
the strong converse holds for G\'acs and K\"orner's formulation, while
it does not hold for Csisz\'ar and Narayan's formulation (i.e., Definition~\ref{def:GKW_CI}). This observation   resembles lossless source coding in that the strong converse holds for the fixed-length version \cite{Wolfowitz} but not the weak variable-length version~\cite{koga2005weak,han2006weak, KPV15,sakai2021third}. 
\end{remark}


\section{Properties of GKW's Common Information}\label{sec:property_GKW}

\enlargethispage{\baselineskip}
We next introduce several interesting properties of $C_{\mathrm{GKW}}(\pi_{XY})$ that elucidate more insights on its properties. 
\subsection{Interpretation in terms of Markov chains and bipartite graphs}\label{sec:interp}
We first focus on the computation of $C_{\mathrm{GKW}}(\pi_{XY})$, which can be understood using Markov chains and bipartite graphs. Consider a discrete-time Markov chain $X_{1}-Y_{1}-X_{2}-Y_{2}-\ldots$ in which   $P_{X_{1}} =\pi_{X}$ is the initial distribution  and  the transition probability distributions satisfy
\begin{align}
P_{Y_{i}|X_{i}}  =\pi_{Y|X} \quad \mbox{and} \quad  P_{X_{i+1}|Y_{i}}  =\pi_{X|Y}
\end{align}
for all $i \in\bbN$.
Then, the subchain $X_{1}-X_{2}-\ldots$ is a \emph{time-homogeneous} Markov chain
with initial distribution $\pi_X$ and transition probability distribution
\begin{equation}
P_{X_{i+1}|X_{i }} =\pi_{\hat{X}|X},
\end{equation}
where
\begin{equation}
\pi_{\hat{X}|X}(x'|x):= \sum_{ y\in\calY}\pi_{X|Y}(x' |y)\pi_{Y|X}(y|x )\quad \mbox{for all}\;\, (x,x')\in\calX^2.
\end{equation}
Obviously, $\pi_{X}$ is the stationary  distribution of $X_{1}-X_{2}-\ldots$. Moreover, this subchain is \emph{reversible} since the stationary  distribution and transition probability distribution  satisfy   
\begin{equation}
\pi_{\hat{X}|X}(x'|x) \pi_X(x) = \pi_{\hat{X}|X}(x|x') \pi_X(x') \quad \mbox{for all}\;\, (x,x')\in\calX^2.
\end{equation}

Now we recap a few more definitions from Markov chains; see, for example, \citet[Chapter~4]{gallagerSP}. For two states $x,x'\in\mathcal{X}$,  $x'$ is \emph{accessible} from $x$, abbreviated as $x\to x'$, if $P_{X_N|X_1} (x'|x)>0$ for some positive integer $N$. The condition $P_{X_N|X_1} (x'|x)>0$   is  also equivalent to the fact that there exists a sequence of states (also called a {\em walk}) $(x_{1},x_{2},\ldots,x_{N})$ such
that $x_{1}=x,x_{N}=x'$, and $\pi_{\hat{X}|X}(x_{i}|x_{i-1})>0$
for $2\le i\le N$. If $\calX$ is the support of $\pi_X$, then the condition $P_{X_N|X_1} (x'|x)>0$ is also equivalent to  $P_{X_1 X_N} (x,x')>0$.
Two distinct states $x$  and $x'$  \emph{communicate}, abbreviated as $x \leftrightarrow x'$, if  $x$  is
accessible from $x'$ and $x'$ is accessible from  $x$. By definition, for a stationary and reversible Markov chain (e.g., the one considered here), the joint distribution  $P_{X_i X_j}$ 
of $(X_i,X_j)$ for $i\neq j$ satisfies    $P_{X_i X_j}(x,x')=P_{X_i X_j}(x', x)$ for all $x,x'\in\mathcal{X}$. Hence,   $x\to x'$ (or $x'\to x$) is equivalent to $x\leftrightarrow x'$.
Obviously, ``$\leftrightarrow$'' is an
equivalence relation, since it satisfies the following three properties:
\begin{itemize}
\item Reflexivity: ${\displaystyle a\leftrightarrow a}$; 
\item Symmetry: $a\leftrightarrow b$ if and only if ${\displaystyle b\leftrightarrow a}$;
\item Transitivity: If $a\leftrightarrow b$ and ${\displaystyle b\leftrightarrow c}$ then ${\displaystyle a\leftrightarrow c.}$ 
\end{itemize}
This allows us to define {\em equivalence classes} for  the relation $\leftrightarrow$. In the language of Markov chains, these are known as  \emph{communicating
classes}, or simply {\em classes}. A set $\mathcal{A}\subset\mathcal{X}$ is termed a \emph{class}   of $\mathcal{X}$ if $\calA$ is non-empty and for all $x\in\calA$, each state $x'\in\calX\setminus\{x\}$ satisfies $x'\in\calA$ if $x\leftrightarrow x'$ and $x'\notin\calA$ if $x\not\leftrightarrow x'$. The classes of $\calX$, denoted as $\calX_i, i\in [r]$, form a partition\footnote{A {\em partition} of a set $\calX$ is a collection of sets $\{\calX_\alpha\}_{\alpha\in\calA}$ such that $\cup_{\alpha\in\calA}\calX_\alpha=\calX$ and $\calX_\alpha\cap\calX_{\alpha'}=\emptyset$ for all $\alpha\neq\alpha'$.} of~$\calX$.  The   classes
of $\mathcal{Y}$ are similarly denoted as $\mathcal{Y}_{j}, j \in [s]$.

Clearly, the Markov chain transitions from a state in $\calX_i$ to a state in  $\calY_j$ with  
positive probability in the sense that $\Pr(Y\in\mathcal{Y}_{j}|X\in\mathcal{X}_{i})=\bone\{i=j\}$ \cite[Theorem 4.2.9]{gallagerSP}
where $(X,Y)\sim\pi_{XY}$, and vice versa.  Hence, $r=s$ and 
\begin{equation}
\Pr\big(Y\in\mathcal{Y}_{j}\big|X\in\mathcal{X}_{i}\big)=\Pr\big(X\in\mathcal{X}_{i}\big|Y\in\mathcal{Y}_{j}\big)=\bone\{i=j\}.\label{eq:GKWtransition}
\end{equation}
Such a partition of $\mathcal{X}$ (or $\mathcal{Y}$), termed an
\emph{ergodic decomposition} \cite{gacs1973common}, is unique. 
If we denote $i^{*}(x)$ as the index $i$ such that $x\in\mathcal{X}_{i}$,
and similarly, $j^{*}(y)$ as the index $j$ such that $y\in\mathcal{Y}_{j}$,
then by \eqref{eq:GKWtransition}, $i^{*}(X)=j^{*}(Y)$. The pair of functions  $(i^{*},j^{*})$ attains the maximization
in \eqref{eq:GKWGKW_CI}. This is because, on one hand, by definition,
$C_{\mathrm{GKW}}(\pi_{XY})\ge H(i^{*}(X))$.
On the other hand, for $(f,g)$ such that $f(X)=g(Y)$ almost surely, 
\begin{equation}
f(X_{1})=g(Y_{1})=f(X_{2})=g(Y_{2})=\ldots,\label{eq:GKWfgfg}
\end{equation}
where $X_{1}-Y_{1}-X_{2}-Y_{2}-\ldots$ is the Markov chain as defined at the start of this section.
Denote the image of $f$ as $\mathcal{U}$. Then, by \eqref{eq:GKWfgfg},
 for each pair of distinct elements $(u,u')$ of $\mathcal{U}$,
we have 
\begin{equation}
\Pr\big(X_{m}\in f^{-1}(u')\big|X_{1}\in f^{-1}(u)\big)=0\qquad 
\mbox{for all}\;\,  m\in\bbN.     
\end{equation}
Hence, for each $u\in\mathcal{U}$, $f^{-1}(u) \subset\calX$
is a  class or the union of several  
classes.  This means that $f(X)$ is determined by $i^{*}(X)$, which
in turn implies that $C_{\mathrm{GKW}}(\pi_{XY})\le H(i^{*}(X))$.
 Hence, $(i^{*},j^{*})$ is the unique pair of functions (up to a
bijection) attaining the maximization in GKW's common information in~\eqref{eq:GKWGKW_CI}.

The ergodic decomposition   can be also expressed in the language
of graph theory. Without loss of generality, we may assume that $\mathcal{X}\cap\mathcal{Y}=\emptyset$.
Consider a  (undirected) bipartite
graph in which the two sets of  vertices are represented by $\calX$ and $\calY$ 
and a pair of vertices $(x,y)\in\mathcal{X}\times\mathcal{Y}$
is adjacent if $\pi_{XY}(x,y)>0$. In an
undirected graph, a vertex $v\in\mathcal{X}\cup\mathcal{Y}$ is \emph{reachable}
from a vertex $u\in\mathcal{X}\cup\mathcal{Y}$ if there is a path
from $u$ to $v$. 
Reachability is
also an equivalence relation, and the equivalence classes of this equivalence
relation are $\mathcal{X}_{i}\cup\mathcal{Y}_{i},i\in[r]$, where
$\mathcal{X}_{i}$ and $\mathcal{Y}_{i},i\in[r]$ are the communicating
classes. The induced subgraphs formed by these equivalence classes
are known as the \emph{connected  components} of the graph. The ergodic
decomposition corresponds to the  decomposition of the graph into connected
components. Fig.~\ref{fig:GKW_decomposition} illustrates an   example of a joint distribution $\pi_{XY}$ together with its ergodic decomposition.

\input{gkw_graph}

\subsection{Connections to Other Quantities}
We now provide an alternative expression for $C_{\mathrm{GKW}}(\pi_{XY})$, which looks similar to the expression  for Wyner's common information in \eqref{eqn:wyner_sim}. This  characterization is due to \citet{ahlswede2006common}.
\begin{proposition} \label{prop:altexpr} It holds that
\begin{align}
C_{\mathrm{GKW}}(\pi_{XY}) &   =\max_{\substack{P_{WXY}:P_{XY}=\pi_{XY},\\
W-X-Y,\;W-Y-X
}
}I(XY;W).\label{eq:GKW-3}
\end{align}
\end{proposition}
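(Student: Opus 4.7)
The plan is to prove the two inequalities separately, leveraging the ergodic decomposition machinery developed in Section~\ref{sec:interp}. Let $K := i^*(X) = j^*(Y)$ denote the common part of $(X,Y)$ so that, by the discussion following~\eqref{eq:GKWfgfg}, $C_{\mathrm{GKW}}(\pi_{XY}) = H(K)$.

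For the easy direction $(\geq)$, I would take the optimal pair $(i^*, j^*)$ attaining the maximum in~\eqref{eq:GKWGKW_CI} and set $W := K = i^*(X) = j^*(Y)$. Since $W$ is a deterministic function of $X$ alone, the Markov chain $W-X-Y$ holds trivially; since $W$ is also a deterministic function of $Y$ alone, $W-Y-X$ holds as well. Moreover $I(XY;W) = H(W) - H(W|X,Y) = H(W) = H(K) = C_{\mathrm{GKW}}(\pi_{XY})$, so this particular $W$ is feasible and achieves the right-hand side of~\eqref{eq:GKW-3}.

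The main obstacle is the converse direction $(\leq)$, which hinges on a \emph{double Markov lemma}: for any $W$ satisfying the joint constraints $W-X-Y$ and $W-Y-X$, we have the Markov chain $W - K - (X,Y)$. I would prove this by a bipartite-graph connectivity argument analogous to the one in Section~\ref{sec:interp}. Concretely, for any $(x,y)$ with $\pi_{XY}(x,y)>0$ and any $w$ in the support of $W$, the first Markov chain gives $P_{W|XY}(w|x,y) = P_{W|X}(w|x)$, and the second gives $P_{W|XY}(w|x,y) = P_{W|Y}(w|y)$. Hence $P_{W|X}(w|x) = P_{W|Y}(w|y)$ whenever $\pi_{XY}(x,y)>0$, i.e., whenever $x$ and $y$ are adjacent in the bipartite graph of $\pi_{XY}$. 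By walking along paths in each connected component $\mathcal{X}_i \cup \mathcal{Y}_i$, the value $P_{W|X}(w|x)$ is constant on $\mathcal{X}_i$ (and similarly for $\mathcal{Y}_i$). Therefore $P_{W|X}(w|x)$ depends on $x$ only through $K(x)=i^*(x)$, which establishes the Markov chain $W-K-X$ and, by symmetry, $W-K-(X,Y)$.

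Once the double Markov lemma is in hand, the converse is immediate:
\begin{align}
I(XY;W) &= I(K,XY;W) = I(K;W) + I(XY;W \mid K) \\
&= I(K;W) \leq H(K) = C_{\mathrm{GKW}}(\pi_{XY}),
\end{align}
where the first equality uses $K = i^*(X) = j^*(Y)$ being a deterministic function of $(X,Y)$, and $I(XY;W\mid K)=0$ follows from $W - K - (X,Y)$. Taking the maximum over feasible $P_{WXY}$ completes the proof. The verification that the supremum is attained (justifying ``$\max$'' in lieu of ``$\sup$'') follows by a standard cardinality-bounding argument on $|\mathcal{W}|$ via the support lemma, just as for Wyner's common information.
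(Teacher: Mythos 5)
Your proposal is correct and follows essentially the same route as the paper: set $W$ equal to the common part for the easy direction, prove via the bipartite-graph connectivity argument that any feasible $W$ satisfies a Markov chain through the common part (your $W-K-(X,Y)$ is the paper's Lemma~\ref{lem:markovWUX}), and then bound $I(XY;W)$ by $H(K)$. The only cosmetic difference is that you phrase the final bound via a chain-rule decomposition whereas the paper uses the data-processing inequality $I(X;W)\le I(X;U)$, but both are immediate consequences of the same lemma.
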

We remark that the objective function in the maximization above  can be replaced by $I(X;W)$ or $I(Y;W)$, since the Markov chains $W-X-Y$ and $W-Y-X$ are assumed. 
\begin{proof}
Let $U=f^{*}(X)=g^{*}(Y)$ be the common part of $(X,Y)\sim\pi_{XY}$
with $(f^{*},g^{*})$ denoting the optimal pair of functions attaining
the maximization in $C_{\mathrm{GKW}}(\pi_{XY})$ in~\eqref{eq:GKWGKW_CI}. By setting $W=U$, we conclude that 
the right-hand side of \eqref{eq:GKW-3} is at least  $C_{\mathrm{GKW}}(\pi_{XY})$. 

To prove the opposite inequality, we first state the following lemma. 
\begin{lemma}\label{lem:markovWUX}
Every $P_{WXY}$ that satisfies the constraints in~\eqref{eq:GKW-3}    also satisfies $W-U-X$. 
\end{lemma}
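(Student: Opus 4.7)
My plan is to exploit the two Markov chains $W-X-Y$ and $W-Y-X$ to show that the conditional distribution $P_{W\mid X}(\cdot\mid x)$ is constant as $x$ ranges over each communicating class $\calX_i$, which is precisely the statement that $P_{W\mid X}$ factors through $U=i^{*}(X)$, i.e., $W-U-X$.

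First, I would translate the two Markov assumptions into the pointwise identity
\begin{equation}
P_{W\mid X}(w\mid x) \;=\; P_{W\mid XY}(w\mid x,y) \;=\; P_{W\mid Y}(w\mid y)
\end{equation}
valid for every $w$ and every $(x,y)$ with $\pi_{XY}(x,y)>0$. Both equalities follow directly from the definitions of $W-X-Y$ and $W-Y-X$ respectively (on the supports of the relevant marginals, conditional distributions are well-defined). This is the only place where the two Markov hypotheses are used.

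Second, I would invoke the ergodic-decomposition/graph picture from Section~\ref{sec:interp}. Fix a class $\calX_i$ and let $x,x'\in\calX_i$. By definition of a communicating class (equivalently, by connectedness of the corresponding component of the bipartite graph on $\calX\cup\calY$ with edges $\{(x,y):\pi_{XY}(x,y)>0\}$), there exists a walk $x=x_0,y_0,x_1,y_1,\ldots,x_m=x'$ with $\pi_{XY}(x_k,y_k)>0$ and $\pi_{XY}(x_{k+1},y_k)>0$ for each $k$. Applying the identity from the previous paragraph alternately along the walk gives
\begin{equation}
P_{W\mid X}(w\mid x) = P_{W\mid Y}(w\mid y_0) = P_{W\mid X}(w\mid x_1) = \cdots = P_{W\mid X}(w\mid x'),
\end{equation}
so $x\mapsto P_{W\mid X}(w\mid x)$ is constant on $\calX_i$ for every $w$. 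Consequently $P_{W\mid X}(w\mid x)$ depends on $x$ only through $i^{*}(x)=U$, which is exactly the Markov chain $W-U-X$.

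I expect the main subtlety to be bookkeeping rather than any genuine mathematical difficulty: one must ensure that the conditional distributions $P_{W\mid X}(\cdot\mid x)$ and $P_{W\mid Y}(\cdot\mid y)$ are well-defined at every vertex of the walk, which is automatic because the walk stays inside the supports of $\pi_X$ and $\pi_Y$. A parallel argument (or symmetry in $(X,Y)$) gives $W-U-Y$ as well, which is what will be needed when this lemma is used to upper-bound $I(XY;W)$ by $H(U)$ in the proof of Proposition~\ref{prop:altexpr}.
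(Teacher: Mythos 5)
Your proof is correct and takes essentially the same route as the paper: both derive the pointwise identity $P_{W\mid X}(\cdot\mid x)=P_{W\mid Y}(\cdot\mid y)$ on $\supp(\pi_{XY})$ from the two Markov chains, then propagate it through the bipartite graph so that $P_{W\mid X}(\cdot\mid x)$ is constant on each connected (communicating) class, which yields $W-U-X$. The only stylistic difference is that you make the walk through the graph explicit, whereas the paper phrases the same step as assigning a distribution $P_{W}^{(v)}$ to each vertex and noting that adjacent vertices share the same distribution.
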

Lemma~\ref{lem:markovWUX} then implies that $I(XY;W)=I(X;W)\le I(X;U)=H(U)$. Hence, the right-hand side of~\eqref{eq:GKW-3} is at most $C_{\mathrm{GKW}}(\pi_{XY})$.
Combining the two points above yields the equality in~\eqref{eq:GKW-3}.
Hence, it remains to prove Lemma~\ref{lem:markovWUX}. 

We now prove Lemma~\ref{lem:markovWUX}. Since $W-X-Y$ and $W-Y-X$, we
have  
\begin{equation}
P_{W|XY}(\cdot|x,y)=P_{W|X}(\cdot|x)=P_{W|Y}(\cdot|y)\label{eq:GKW-4}
\end{equation}
for all $(x,y)$ such that $P_{XY}(x,y)>0$. Using the graph-theoretic interpretation of GKW's common information as described in Section~\ref{sec:interp},   we assign a distribution $P_{W}^{(v)} \in\calP(\mathcal{W})$
to each vertex $v\in\mathcal{X}\cup\mathcal{Y}$ in the bipartite
graph. Here $P_{W}^{(v)}$ corresponds to
$P_{W|X}(\cdot|v)$ if $v\in\mathcal{X}$, or $P_{W|Y}(\cdot|v)$
if $v\in\mathcal{Y}$. From \eqref{eq:GKW-4} and   the assumption  that $P_{XY}=\pi_{XY}$, these distributions satisfy
that $P_{W}^{(v)}=P_{W}^{(\hat{v})}$  for any two adjacent vertices
$(v,\hat{v})$.   As a consequence, these distributions are identical
for all vertices in a connected component. As mentioned in Section
\ref{sec:interp}, $f^{*}$ is a function indicating which component~$X$
belongs to. Hence, if we denote the joint distribution of $(W,X,U)$
as $P_{WXU}$, then given $u$, the conditional distribution   $P_{W |X }( \cdot |x )$, which is equal to $P_{W |UX} ( \cdot |u, x )$,  remains
the same for all $x $ such that $u=f^{*}(x)$.  That is, given each $u$, 
\begin{equation}
P_{W|UX}(\cdot|u,x)=\sum_{x'} P_{X|U}(x'|u)P_{W|UX}(\cdot|u,x')=P_{W|U}(\cdot|u)
\end{equation}
for all $x$ such that $P_{UX}(u,x)>0$.  Hence, $W-U-X$ holds, completing
the proof of Lemma~\ref{lem:markovWUX}.  \end{proof}

We now compare $C_{\mathrm{GKW}}(\pi_{XY})$ with the mutual
information $I_{\pi}(X;Y)$ and Wyner's common information $C_{\mathrm{W}}(\pi_{XY})$.
\begin{proposition} For any joint source $(X,Y)\sim\pi_{XY}$, 
\begin{equation}
C_{\mathrm{GKW}}(\pi_{XY})\le I_{\pi}(X;Y)\le C_{\mathrm{W}}(\pi_{XY}). \label{eq:CIC}
\end{equation}
Moreover, the two inequalities become equalities if and only if  $X-U-Y$ holds, where $U$ is the common part of $X$ and $Y$. 
\end{proposition}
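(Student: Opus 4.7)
The plan is to derive both inequalities using the alternative characterization of $C_{\mathrm{GKW}}(\pi_{XY})$ in Proposition~\ref{prop:altexpr} and the definition of $C_{\mathrm{W}}(\pi_{XY})$ from~\eqref{eqn:wyner_sim}, reducing everything to the chain rule and data processing. The equality case then follows cleanly from the identity $I(X;Y) = H(U) + I(X;Y|U)$, which uses crucially that the common part $U$ is simultaneously a deterministic function of $X$ and of $Y$.

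For the left inequality $C_{\mathrm{GKW}}(\pi_{XY}) \le I_\pi(X;Y)$, I would take any $P_{WXY}$ admissible in the maximization on the right-hand side of \eqref{eq:GKW-3}, so that $P_{XY}=\pi_{XY}$, $W-X-Y$, and $W-Y-X$. By the first Markov chain, $I(XY;W) = I(X;W) + I(Y;W|X) = I(X;W)$; by the second, $I(X;W) \le I(X;Y)$ via data processing on $W - Y - X$. Maximizing over all such $W$ yields $C_{\mathrm{GKW}}(\pi_{XY}) \le I_\pi(X;Y)$. For the right inequality $I_\pi(X;Y) \le C_{\mathrm{W}}(\pi_{XY})$, I would take any $P_{WXY}$ feasible in \eqref{eqn:wyner_sim}, i.e., $X - W - Y$ and $P_{XY}=\pi_{XY}$. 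Then
\begin{equation}
I(X;Y) \le I(X;WY) = I(X;W) + I(X;Y|W) = I(X;W) \le I(XY;W),
\end{equation}
where the middle equality uses the Markov chain $X-W-Y$. Minimizing over $W$ then gives $I_\pi(X;Y) \le C_{\mathrm{W}}(\pi_{XY})$.

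For the equality case, let $U=f^*(X)=g^*(Y)$ denote the common part, so $H(U)=C_{\mathrm{GKW}}(\pi_{XY})$. Because $U$ is a deterministic function of $Y$ (so $I(X;U|Y)=0$) and also of $X$ (so $I(X;U)=H(U)$), the chain rule gives
\begin{equation}
I(X;Y) = I(X;UY) - I(X;U|Y) = I(X;U) + I(X;Y|U) = H(U) + I(X;Y|U).
\end{equation}
Therefore $C_{\mathrm{GKW}}(\pi_{XY}) = I_\pi(X;Y)$ is equivalent to $I(X;Y|U)=0$, namely $X-U-Y$. This is the main algebraic observation and the only nontrivial step; everything else is rearrangement.

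It remains to verify that when $X-U-Y$ holds, the other inequality also collapses. In that case $(W=U)$ is feasible in \eqref{eqn:wyner_sim}, and since $U$ is a function of $X$ (equivalently, of $Y$), $I(XY;U)=H(U)$; combined with the identity above, this gives $C_{\mathrm{W}}(\pi_{XY}) \le H(U) = I_\pi(X;Y)$, which together with the already-established converse direction yields $C_{\mathrm{W}}(\pi_{XY}) = I_\pi(X;Y)$. Conversely, if both inequalities in~\eqref{eq:CIC} are equalities, the left equality alone forces $X-U-Y$ by the argument above. The only subtlety I anticipate is being careful that the Markov/data-processing manipulations are valid; there are no cardinality or continuity issues since $U$ takes values in the finite index set of the ergodic decomposition.
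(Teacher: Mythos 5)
Your proof of the two inequalities is correct and follows the same route the paper takes (the paper compresses this to ``follow directly by their definitions,'' but the underlying manipulations are the chain rule and data processing just as you wrote them). Your treatment of the left-inequality equality case is also essentially identical to the paper's: the paper uses $I(X;Y)=I(XU;Y)=H(U)+I(X;Y|U)$ while you use $I(X;Y)=I(X;UY)=H(U)+I(X;Y|U)$, which is the same identity exploited through $U$ being a function of $Y$ rather than a function of $X$.

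The one place you part ways with the paper is the rightmost inequality. You dispose of its equality case by appealing to the left equality (``if both are equalities, the left one alone forces $X-U-Y$''), which is enough under a strict reading of ``the two inequalities become equalities iff $X-U-Y$.'' The paper, however, proves the stronger claim that the rightmost equality \emph{by itself} forces $X-U-Y$, and this requires a separate argument: take an optimal $W$ for Wyner's common information, observe that $I(XY;W)\ge I(X;W)\ge I(X;Y)$ via the Markov chain $X-W-Y$, and deduce from equality throughout (and its symmetric counterpart) that both $W-X-Y$ and $W-Y-X$ hold; then Proposition~\ref{prop:altexpr} gives $C_{\mathrm W}=I(XY;W)\le C_{\mathrm{GKW}}$, which collapses the whole chain and identifies $W$ (up to bijection) with the common part $U$, so $X-U-Y$. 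This extra piece shows the two equality conditions cannot occur separately — a fact that your argument does not establish. It would be worth adding, both because it strengthens the result and because it is the one step in the proof that does not reduce to the elementary identity you correctly flagged as ``the only nontrivial step.''
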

The leftmost inequality in~\eqref{eq:CIC} and the corresponding equality conditions were proved  by  \citet{gacs1973common}. 
\begin{proof}
The inequalities in this proposition follow directly by  their definitions. We next consider the conditions for equality. 
Obviously, if    the common part $U$  of $X$ and $Y$ satisfies $X-U-Y$, then both inequalities in \eqref{eq:CIC} are equalities. On the other hand, if
the leftmost inequality in~\eqref{eq:CIC} is an equality, then $I(X;Y) =H(U)$. Combining this with the fact $I(X;Y)=I(XU;Y)=H(U)+I(X;Y|U)$ yields that $I(X;Y|U)=0$, i.e., $X-U-Y$. 

We next assume that the rightmost  inequality in \eqref{eq:CIC} is an equality. Let $P_{WXY}$ be a distribution attaining $ C_{\mathrm{W}}(\pi_{XY})$. Then, $I(XY;W)\ge I(X;W) \ge I(X;Y)$ due to the Markov chain $X-W-Y$. By assumption, $I(XY;W)=I(X;Y)$, which implies that $I(XY;W)= I(X;W)$, i.e., $W-X-Y$ holds. By symmetry, $W-Y-X$ also holds. Combining these two conditions with Proposition \ref{prop:altexpr} yields that $ C_{\mathrm{W}}(\pi_{XY}) \le C_{\mathrm{GKW}}(\pi_{XY})$. Since the inequalities in  \eqref{eq:CIC} imply that $C_{\mathrm{W}}(\pi_{XY})$ cannot be (strictly) smaller than $C_{\mathrm{GKW}}(\pi_{XY})$, we have $ C_{\mathrm{W}}(\pi_{XY})= C_{\mathrm{GKW}}(\pi_{XY})$, which in turn implies that the random variable  $W$ under the distribution $P_{WXY}$ is   the common part of $(X,Y)\sim\pi_{XY}$.  By the choice of   $W$, $X-W-Y$ holds.
\end{proof}

\subsection{When is GKW's common information positive? }
Another interesting property of $C_{\mathrm{GKW}}(\pi_{XY})$
is its  intimate connection to the maximal correlation defined in~\eqref{eq:mc}.

\begin{proposition} \label{prop:CGWW0} For  $(X,Y)\sim\pi_{XY}$, the following   are equivalent.
\begin{itemize}
\item[(a)] $\rho_{\mathrm{m}}(X;Y)=1$;
\item[(b)] $C_{\mathrm{GKW}}(\pi_{XY})>0$;
\item[(c)]    There exists a pair of nonconstant functions\footnote{A nonconstant function is one whose image contains more than one element.} $(f,g)$ such that $f(X)=g(Y)$ almost surely. 
\end{itemize}
\end{proposition}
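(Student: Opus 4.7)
The plan is to prove the chain $(b)\Leftrightarrow(c)\Rightarrow(a)\Rightarrow(c)$. Two of these are essentially tautological; the real content lies in $(a)\Rightarrow(c)$.

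$(b)\Leftrightarrow(c)$: This is immediate from the definition \eqref{eq:GKWGKW_CI}. If $(f,g)$ are nonconstant with $f(X)=g(Y)$ almost surely, then $f(X)$ takes at least two values with positive probability, so $H(f(X))>0$ and therefore $C_{\mathrm{GKW}}(\pi_{XY})>0$. Conversely, if $C_{\mathrm{GKW}}(\pi_{XY})>0$ then by \eqref{eq:GKWGKW_CI} there exist $(f,g)$ with $f(X)=g(Y)$ a.s. and $H(f(X))>0$; the entropy being positive forces $f(X)$ to be nonconstant, which in turn forces $g(Y)$ to be nonconstant since the two are almost surely equal.

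$(c)\Rightarrow(a)$: Given nonconstant $f,g$ with $f(X)=g(Y)$ a.s., we may replace them by real-valued injective relabellings of their (at-most-countable) images without disturbing the almost-sure equality; this is possible because the common part alphabet can be taken finite/countable by the ergodic-decomposition discussion of Section \ref{sec:interp}. Both $\mathrm{Var}(f(X))$ and $\mathrm{Var}(g(Y))$ are then strictly positive (and we can truncate to render them finite if needed, preserving nonconstancy). Since $f(X)=g(Y)$ a.s., the Pearson correlation $\rho(f(X);g(Y))=1$, so the supremum in \eqref{eq:mc} is at least $1$. Combined with the trivial bound $\rho_{\mathrm{m}}\le 1$, this yields $\rho_{\mathrm{m}}(X;Y)=1$.

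$(a)\Rightarrow(c)$: This is the main obstacle. Assume $\rho_{\mathrm{m}}(X;Y)=1$. Take a sequence of functions $(f_n,g_n)$ with
\begin{equation}
\bbE[f_n(X)]=\bbE[g_n(Y)]=0,\quad \bbE[f_n(X)^2]=\bbE[g_n(Y)^2]=1,\quad \bbE[f_n(X)g_n(Y)]\to 1.
\end{equation}
Then $\bbE[(f_n(X)-g_n(Y))^2]=2-2\,\bbE[f_n(X)g_n(Y)]\to 0$, so $\{f_n(X)\}$ and $\{g_n(Y)\}$ are asymptotically equal in $L^2$. Viewing $L^2(\sigma(X))$ and $L^2(\sigma(Y))$ as closed subspaces of $L^2(\Omega,\sigma(X,Y),P)$, and using that the conditional-expectation operator $T:L^2(\sigma(Y))\to L^2(\sigma(X))$, $T\phi:=\bbE[\phi(Y)\mid X]$, has operator norm equal to $\rho_{\mathrm{m}}(X;Y)=1$, one shows that $1$ is attained as a singular value: either directly by weak-$L^2$ compactness of the unit ball (extract a weakly convergent subsequence of $f_n(X)$ with limit $Z\in L^2(\sigma(X))$, argue $\|Z\|_2=1$ using that $\bbE[f_n(X)g_n(Y)]\to 1$, and observe $Z=\lim g_n(Y)$ is also in $L^2(\sigma(Y))$) or, in the finite-alphabet case, directly by compactness of the sphere in finite dimensions. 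The resulting $Z$ is simultaneously $\sigma(X)$- and $\sigma(Y)$-measurable, so $Z=f(X)=g(Y)$ a.s.\ for some measurable $f,g$; since $\bbE[Z]=0$ and $\bbE[Z^2]=1$, the functions $f,g$ are nonconstant. The main subtlety is the attainment step in the general (non-finite) case, which requires the weak-compactness argument above; in the finite-alphabet setting central to this monograph, the argument collapses to the fact that the supremum in \eqref{eq:mc} is achieved on a compact set, so this issue disappears.
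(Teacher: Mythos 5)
Your overall structure $(b)\Leftrightarrow(c)\Rightarrow(a)\Rightarrow(c)$ is sound, and the finite-alphabet version of the argument (which is the setting the monograph works in: cf.\ the bipartite-graph and ergodic-decomposition discussion in Section~\ref{sec:interp}) is correct. One simplification for $(c)\Rightarrow(a)$: the injective-relabelling-plus-truncation step is unnecessary. Since $f$ is nonconstant there is a measurable set $A$ with $0<\Pr(f(X)\in A)<1$, and replacing $(f,g)$ by $(\bone\{f\in A\},\bone\{g\in A\})$ yields nonconstant $\{0,1\}$-valued functions still satisfying the almost-sure equality, with finite positive variance for free and Pearson correlation exactly $1$.

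The weak-compactness argument you offer for $(a)\Rightarrow(c)$ in the general case, however, does not close, and this is a genuine gap. Weak convergence in $L^2$ does not preserve the norm: if $f_n(X)\rightharpoonup Z$ along a subsequence, one only gets $\|Z\|_2\le\liminf_n\|f_n(X)\|_2=1$, and $Z$ can perfectly well be $0$ (an orthonormal sequence converges weakly to $0$). The step ``argue $\|Z\|_2=1$ using that $\bbE[f_n(X)g_n(Y)]\to 1$'' is exactly where it breaks: the bilinear form $\langle f_n(X),g_n(Y)\rangle$ passes to $\langle Z,Z\rangle$ in the limit only if at least one of the two sequences converges \emph{strongly}, which weak compactness alone does not provide. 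Phrased structurally, the operator norm of the conditional-expectation operator $T$ being $1$ does not imply that $1$ is attained as a singular value unless $T$ is compact --- a diagonal operator on $\ell^2$ with diagonal entries $1-1/n$ has norm $1$ attained by no vector. For finite $\calX,\calY$ the unit sphere of the mean-zero subspace of $L^2(\sigma(X))$ is \emph{norm}-compact, so one can extract a strongly convergent subsequence and the rest goes through; that is the argument you should present directly, not as a fallback from the weak-compactness route. In the general case $(a)\Rightarrow(c)$ requires a compactness hypothesis on $T$ (e.g.\ Hilbert--Schmidt) that your proposal does not supply.
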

Thus, for any source $(X,Y)\sim\pi_{XY}$ with
 maximal correlation strictly smaller than $1$, its GKW's
common information is zero. This class of sources includes the DSBS
and Gaussian sources with correlation coefficients in $(-1,1)$.  This is essentially why \citet{gacs1973common} titled their paper {\em ``Common information is far less than mutual information''}. 
Further refinements to  GKW's common information  (when it is equal to zero) that captures other aspects of the sources' correlation will be the main subject of discussion in  Part~\ref{part:three}.

\begin{example} \label{ex:VXY_GW}
Let us now revisit Example~\ref{ex:XYV_Wyner}   in which $X=(\tilde{X},V)$ and  $Y=(\tilde{Y},V)$ for mutually independent random variables $\tilde{X} \in \tilde{\calX}$, $\tilde{Y}\in\tilde{\calY}$, and $V\in \calV$. 
This example
was first presented at the beginning of Section~\ref{ch:wynerCI}
to illustrate that Wyner's common information for this source coincides
with the intuitive quantity $H(V)$. Here, we can also easily observe that GKW's
common information also coincides with $H(V)$. 
This is because, the bipartite graph induced by the distribution
of $(X,Y)$ is  such that given each pair $(v,\hat{v})\in\mathcal{V}^{2}$,
two vertices $(\tilde{x},v)$ and $(\tilde{y},\hat{v})$ with $\tilde{x}\in\tilde{\mathcal{X}},\tilde{y}\in\tilde{\mathcal{Y}}$
are adjacent if and only if $v=\hat{v}$. Hence, each element in $\mathcal{V}$
identifies a unique  connected component of the graph, and {\em vice versa}.
This implies that $C_{\mathrm{GKW}}(\pi_{XY})=H(V)$ and $i^{*}(X)=j^{*}(Y)=V$
(i.e., $V$ is the common part of the joint source  $(X,Y)\sim\pi_{XY}$).
   This example implies that GKW's common
information is zero if the sources are independent (i.e., $V$ is
constant). However, the converse clearly does not hold. Indeed, for any distribution
$\pi_{XY}$ which is fully supported on $\mathcal{X}\times\mathcal{Y}$ (e.g., the DSBS with $|\rho|\ne 1$), its 
GKW's common information is  identically zero.  \end{example}
\section{The Gray--Wyner System}
\label{sec:GW_system}
In Section~\ref{sec:GKW_CI}, we saw one operational interpretation of GKW's common information. In this and the next section, we present two other operational interpretations; these sections may be omitted at a first reading as further discussions on this topic in Part \ref{part:three} depend only on Sections~\ref{sec:GKW_CI} and~\ref{sec:property_GKW}.

We now relate GKW's common information to
the common rate   in the Gray--Wyner system, 
defined in Section~\ref{sec:wyner-GW}. In the Gray--Wyner system,
the common rate is denoted as $R_{0}$ and two private rates are denoted
as $R_{1}$ and $R_{2}$. By Shannon's source coding theorem, if there exists
a Gray--Wyner code such that the source $(X,Y)$ can be reconstructed
almost losslessly by two decoders respectively, then the rate tuple
$(R_{0},R_{1},R_{2})$ of this code must satisfy $R_{0}+R_{1}\ge H(X)$
and $R_{0}+R_{2}\ge H(Y)$. Obviously, these necessary conditions
are not sufficient in general. For example, a tuple $(R_{0},R_{1},R_{2})$
such that $R_{1}=R_{2}=0$ and $R_{0}=\max\{H(X),H(Y)\}$ satisfies
these necessary conditions. However, by Shannon's source coding theorem,
the optimal rate for lossless source coding of the joint source $(X,Y)$
is $H(XY)$, which is strictly larger than $\max\{H(X),H(Y)\}$ unless
$X$ is a function of $Y$ or $Y$ is a function of $X$. Hence, in
general, there is no Gray--Wyner code with such a rate tuple $(R_{0},R_{1},R_{2})$
such that $(X,Y)$ can be reconstructed almost losslessly by the two decoders. In addition, if $(R_{0},R_{1},R_{2})=(0,H(X),H(Y))$,
then  coding   $X$ and $Y$ separately  with rates $R_{1}$
and $R_{2}$ is clearly feasible. Hence, within the transition between these
two extreme cases, there is a maximum common rate $R_{0}$ such that
$R_{0}+R_{1}=H(X)$, $R_{0}+R_{2}=H(Y)$, and the source $(X,Y)$ can
be transmitted almost losslessly to the two decoders using
a Gray--Wyner code with rate tuple $(R_{0},R_{1},R_{2})$. This maximum
common rate $R_{0}$ can be regarded as a form of common information
of $(X,Y)$. Indeed, if we consider the example $X=(\tilde{X},V)$
and $Y=(\tilde{Y},V)$ where $\tilde{X},\tilde{Y},V$ are mutually
independent (cf.\ Example~\ref{ex:VXY_GW}), then the maximum common rate coincides with the intuitive
``common information'' $H(V)$. 
Formally, we define the pairwise sum rate-common information
based on the Gray--Wyner system as follows (compare to Definition~\ref{def:gw_R0}). 

\begin{definition} \label{def:gw_R0_max2} The {\em pairwise sum rate-common information
based on the Gray--Wyner system} $S_{\mathrm{GW}}(\pi_{XY})$
between a pair of random variables $(X,Y)\sim\pi_{XY}$ is the supremum
of all rates $R_{0}$ such that for all $\epsilon>0$, there exists
a sequence of $(n,R_{0},R_{1},R_{2})$ Gray--Wyner codes $\{(f_{0,n},f_{1,n},f_{2,n},\varphi_{1,n},\varphi_{2,n})\}_{n=1}^{\infty}$
such that $R_{0}+R_{1}\le H(X)+\epsilon$ and $R_{0}+R_{2}\le H(Y)+\epsilon$
and the probability of error in~\eqref{eqn:prob_error} vanishes as the length of the code $n$
tends to infinity. \end{definition}

The common information in Definition \ref{def:gw_R0_max2} differs
from G\'acs and K\"orner's formulation of the common information in
\cite{gacs1973common}  in two aspects. Firstly,  the functions
$f_{0,n},f_{1,n},f_{2,n}$ in Definition~\ref{def:gw_R0_max2} are
defined on the set $\mathcal{X}^{n}\times\mathcal{Y}^{n}$; while
the functions $f_{n},\tilf_{n}$ in \cite{gacs1973common} are defined
on $\mathcal{X}^{n}$ and  similarly, $g_{n},g_{n}'$ 
are defined on $\mathcal{Y}^{n}$. Secondly,  only \emph{one}
function $f_{0,n}$ is used to extract common randomness in Definition~\ref{def:gw_R0_max2}; while in \citet{gacs1973common}, \emph{two}
functions $f_{n}$ and $g_{n}$ are employed to extract common randomness
in a distributed way. \citet{ahlswede2006common} (and also \citet{kamath2010new}) showed that the common information in Definition~\ref{def:gw_R0_max2}
  coincides with  the one based on distributed randomness extraction in Definition~\ref{def:GKW_CI}.

\begin{theorem} The pairwise sum rate-common  information based on the Gray--Wyner system
\begin{equation}
S_{\mathrm{GW}}(\pi_{XY})=C_{\mathrm{GKW}}(\pi_{XY}).\label{eqn:wyner_gw}
\end{equation}
\end{theorem}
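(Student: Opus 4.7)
The plan is to split the claim $S_{\mathrm{GW}}(\pi_{XY}) = C_{\mathrm{GKW}}(\pi_{XY})$ into matching achievability and converse bounds, combining the Gray--Wyner source coding toolbox with the variational characterization of GKW's common information in Proposition~\ref{prop:altexpr}.

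For achievability, I would let $W := f^*(X) = g^*(Y)$ be the common part supplied by Theorem~\ref{thm:GKW_CI}, so that $H(W) = C_{\mathrm{GKW}}(\pi_{XY})$. Since each encoder observes $(X^n, Y^n)$, it can form $W^n = (f^*(X_1),\ldots, f^*(X_n)) = (g^*(Y_1),\ldots, g^*(Y_n))$ componentwise. I would design $f_0$ to losslessly encode $W^n$ at rate $H(W) + \epsilon$, and $f_1, f_2$ to encode $X^n$ and $Y^n$ conditionally on the shared description of $W^n$ using Shannon lossless source codes at rates $H(X \mid W) + \epsilon$ and $H(Y \mid W) + \epsilon$, respectively. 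Each decoder first recovers $W^n$ from $M_0$ and then uses its private message for lossless reconstruction of its intended source. Because $W$ is a deterministic function of $X$ (and of $Y$), the chain rule gives $H(W) + H(X \mid W) = H(X)$, so $R_0 + R_1 \le H(X) + 2\epsilon$; a symmetric argument yields $R_0 + R_2 \le H(Y) + 2\epsilon$. Letting $\epsilon \downarrow 0$ establishes $S_{\mathrm{GW}}(\pi_{XY}) \ge C_{\mathrm{GKW}}(\pi_{XY})$.

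For the converse, I would invoke the classical Gray--Wyner converse: given any achievable $(R_0, R_1, R_2)$ with vanishing error, a standard single-letterization using $W_i := (M_0, X^{i-1}, Y^{i-1})$ together with a uniform time-sharing random variable, followed by the Fenchel--Bunt convex-cover cardinality bound, furnishes for each block length $n$ a single-letter auxiliary $W_n$ on an alphabet of size bounded uniformly in $n$ such that
\begin{equation}
R_0 + \epsilon_n \ge I(W_n; XY),\;\; R_1 + \epsilon_n \ge H(X \mid W_n),\;\; R_2 + \epsilon_n \ge H(Y \mid W_n),
\end{equation}
with $\epsilon_n \to 0$. The pairwise sum-rate constraints in Definition~\ref{def:gw_R0_max2} then yield $R_0 \le I(W_n; X) + \epsilon + \epsilon_n$ and $I(W_n; Y \mid X) \le \epsilon + 2\epsilon_n$, with symmetric bounds $R_0 \le I(W_n; Y) + \epsilon + \epsilon_n$ and $I(W_n; X \mid Y) \le \epsilon + 2\epsilon_n$.

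The final step, where the main technical obstacle lies, is a compactness argument. Because $\calX$ and $\calY$ are finite and $|\calW|$ is uniformly bounded, the set of joint distributions $P_{W_n X Y}$ with $(X, Y)$-marginal equal to $\pi_{XY}$ sits in a compact subset of the probability simplex. I would extract a convergent subsequence $W_{n_k} \to W^*$ in distribution and use the continuity of mutual information to conclude that in the limit $I(W^*; Y \mid X) = I(W^*; X \mid Y) = 0$, that is, both Markov chains $W^* - X - Y$ and $W^* - Y - X$ hold exactly. Proposition~\ref{prop:altexpr} then gives $I(W^*; XY) \le C_{\mathrm{GKW}}(\pi_{XY})$, and the chain
\begin{equation}
\limsup_n R_0 \le I(W^*; X) + \epsilon = I(W^*; XY) + \epsilon \le C_{\mathrm{GKW}}(\pi_{XY}) + \epsilon
\end{equation}
closes the converse upon sending $\epsilon \downarrow 0$. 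The delicate aspect is ensuring the cardinality bound on $\calW_n$ is genuinely uniform in $n$ (so compactness applies) and that the approximate Markov conditions $I(W_n; Y \mid X) \to 0$ pass cleanly to the limit---both of which are standard but require careful bookkeeping.
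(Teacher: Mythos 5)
Your proof is correct and reaches the same endpoint as the paper's: both reduce the claim to the observation that the pairwise sum-rate constraints force the Markov chains $W-X-Y$ and $W-Y-X$, and then both invoke Proposition~\ref{prop:altexpr} to identify the resulting maximization of $I(XY;W)$ with $C_{\mathrm{GKW}}(\pi_{XY})$. The difference is one of granularity. The paper's proof is a single step: it cites the closed Gray--Wyner rate region (the set of $(R_0,R_1,R_2)$ with $R_0\ge I(XY;W)$, $R_1\ge H(X|W)$, $R_2\ge H(Y|W)$), intersects it with the Pangloss-type constraints $R_0+R_1=H(X)$ and $R_0+R_2=H(Y)$, notes that $I(XY;W)+H(X|W)=H(X)+I(W;Y|X)$ so those constraints are equivalent to the two Markov conditions, and finishes with Proposition~\ref{prop:altexpr}. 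You unpack this: your achievability is the explicit successive-refinement code built from the common part $W^*=f^*(X)=g^*(Y)$ (using $H(W^*)+H(X|W^*)=H(X)$ since $W^*$ is a deterministic function of $X$, and symmetrically for $Y$), and your converse re-derives the Gray--Wyner single-letterization and then adds the compactness/subsequence argument needed to pass from the $\epsilon$-approximate Markov conditions $I(W_n;Y|X)\le\epsilon+2\epsilon_n$ to exact ones in the limit. That limiting step is precisely what the paper absorbs into the word ``closure'' of the Gray--Wyner region; making it explicit (with the uniform cardinality bound on $\calW$ so that the simplex of joint laws is compact) is a genuine service, and your achievability construction is instructive even though the paper leaves it implicit. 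One small bookkeeping remark: since in Definition~\ref{def:gw_R0_max2} each $\epsilon>0$ yields its own sequence of codes and hence its own limiting auxiliary $W^*_\epsilon$, you should extract a second convergent subsequence as $\epsilon\downarrow 0$ before concluding; this is routine but should be stated.
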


\begin{proof} It was shown by \citet{GrayWyner} that the closure
of the set of all rate tuples $(R_{0},R_{1},R_{2})$ such that there
exists a sequence of $(n,R_{0},R_{1},R_{2})$ Gray--Wyner codes  satisfying
that the probability of error vanishes as the length of the code $n$
tends to infinity, is the set of $(R_{0},R_{1},R_{2})$ such that
$R_{0}\ge I(XY;W),R_{1}\ge H(X|W),R_{2}\ge H(Y|W)$ for some random
variable $W$. Hence, 
\begin{align}
S_{\mathrm{GW}}(\pi_{XY}) & =\max_{\substack{P_{WXY}:P_{XY}=\pi_{XY},\\
I(XY;W)+H(X|W)=H(X),\\
I(XY;W)+H(Y|W)=H(Y)
}
}I(XY;W)\\
 & =\max_{\substack{P_{WXY}:P_{XY}=\pi_{XY},\\
W-X-Y,\;W-Y-X
}
}I(XY;W) =C_{\mathrm{GKW}}(\pi_{XY}),
\end{align}
where the final equality follows from Proposition \ref{prop:altexpr}.
\end{proof}

Similarly to Wyner's common information, $C_{\mathrm{GKW}}(\pi_{XY})$
also has (at least) two operational interpretations; one as the maximum
rate of almost identical common randomness that can be extracted from
two correlated sources independently, and the other one as the maximum
common rate of the Gray--Wyner system keeping the sum of the common
rate and the private rate at the entropy of the corresponding source.

We conclude this section by mentioning that it would be interesting to establish an analogue of Theorem~\ref{thm:wyner_cts} (due to \citet{XuLiuChen}) for GKW's common information. In particular, how is GKW's common information related to the {\em lossy} Gray--Wyner system? More specifically, is it true that, like Wyner's common information, under mild conditions and for sufficiently small distortion levels, the lossy version of GKW's common information coincides with its almost lossless counterpart?

\section{Channel Coding with an Input Distribution Constraint} \label{sec:input_dist}

\begin{figure}
\centering 
\tikzset{every picture/.style={line width=0.75pt}} 

\begin{tikzpicture}[x=0.75pt,y=0.75pt,yscale=-1,xscale=1,scale=0.9]

\draw    (332.43,514.67) -- (379.43,514.67) ;
\draw [shift={(381.43,514.67)}, rotate = 180] [color={rgb, 255:red, 0; green, 0; blue, 0 }  ][line width=0.75]    (10.93,-3.29) .. controls (6.95,-1.4) and (3.31,-0.3) .. (0,0) .. controls (3.31,0.3) and (6.95,1.4) .. (10.93,3.29)   ;
\draw   (263,494.67) -- (333,494.67) -- (333,534.67) -- (263,534.67) -- cycle ;
\draw   (121,494.67) -- (191,494.67) -- (191,534.67) -- (121,534.67) -- cycle ;
\draw   (383,494.67) -- (453,494.67) -- (453,534.67) -- (383,534.67) -- cycle ;
\draw    (454,514.67) -- (507,514.67) ;
\draw [shift={(509,514.67)}, rotate = 180] [color={rgb, 255:red, 0; green, 0; blue, 0 }  ][line width=0.75]    (10.93,-3.29) .. controls (6.95,-1.4) and (3.31,-0.3) .. (0,0) .. controls (3.31,0.3) and (6.95,1.4) .. (10.93,3.29)   ;
\draw    (67,514.67) -- (120,514.67) ;
\draw [shift={(122,514.67)}, rotate = 180] [color={rgb, 255:red, 0; green, 0; blue, 0 }  ][line width=0.75]    (10.93,-3.29) .. controls (6.95,-1.4) and (3.31,-0.3) .. (0,0) .. controls (3.31,0.3) and (6.95,1.4) .. (10.93,3.29)   ;
\draw    (191,514.67) -- (260,514.67) ;
\draw [shift={(262,514.67)}, rotate = 180] [color={rgb, 255:red, 0; green, 0; blue, 0 }  ][line width=0.75]    (10.93,-3.29) .. controls (6.95,-1.4) and (3.31,-0.3) .. (0,0) .. controls (3.31,0.3) and (6.95,1.4) .. (10.93,3.29)   ;

\draw (194,486.67) node [anchor=north west][inner sep=0.75pt]    {$X^{n} \!\sim\! \pi _{X}^{n}$};
\draw (343,489.67) node [anchor=north west][inner sep=0.75pt]    {$Y^{n}$};
\draw (75,487.67) node [anchor=north west][inner sep=0.75pt]    {$M_{n}$};
\draw (277,503) node [anchor=north west][inner sep=0.75pt]    {$\pi _{Y|X}^{n}$};
\draw (127,503) node [anchor=north west][inner sep=0.75pt]    {$P_{X^{n} |M_{n}}$};
\draw (393,503) node [anchor=north west][inner sep=0.75pt]    {$P_{\hat{M}_{n} |Y^{n}}$};
\draw (467,481.67) node [anchor=north west][inner sep=0.75pt]    {$\hat{M}_{n}$};

\end{tikzpicture}
\caption{\label{fig:channelcoding}The channel coding problem with an input distribution constraint}
\end{figure}
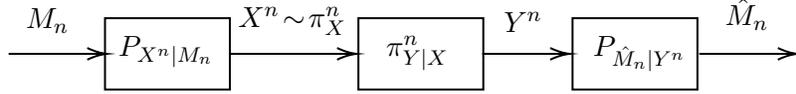

\label{sec:cc} In this section, we provide a third  and final operational interpretation
of GKW's common information in the context
of the classical problem of channel coding, but with a slight twist. Consider the   channel coding problem with
an {\em input distribution constraint} as illustrated in Fig.~\ref{fig:channelcoding}. Let $\pi_{Y|X} \in \calP(\calY|\calX)$ denote the channel,
and $M_{n}\sim\mathrm{Unif}[2^{nR}]$ a uniformly distributed message with rate $R$. 

\begin{definition} A \emph{stochastic $(n,R)$-code} consists of
two stochastic mappings, the stochastic encoder $P_{X^{n}|M_{n}}\in \calP(\mathcal{X}^{n}|[2^{nR}])$
and   the stochastic decoder $P_{\hat{M}_{n}|Y^{n}} \in \calP([2^{nR}]|\mathcal{Y}^{n})$.  \end{definition}

\begin{definition} \label{def:input_dist} 
The \emph{channel
capacity  with input distribution $\pi_{X}\in \calP(\calX)$}, denoted as $C(\pi_{X})$,
is   the supremum of rates $R$ such that there exists a
sequence of stochastic \emph{$(n,R)$-}codes $\{(P_{X^{n}|M_{n}},P_{\hat{M}_{n}|Y^{n}})\}_{n\in\bbN}$
satisfying that the   distribution of $X^{n}$ is exactly equal
to $\pi_{X}^{n}$ for each $n\in\bbN$ and the average probability of error  $\Pr(\hat{M}_{n}\ne M_{n})$
vanishes as the length of the code $n$ tends to infinity. \end{definition}

This notion is markedly different from the problem of {\em channel coding with input cost}~\cite[Section~3.3]{elgamal} in which the input codewords $x^n(m) , m \in [2^{nR}]$ are required to satisfy a constraint of the form $\frac{1}{n}\sum_{i=1}^nb(x_i(m))\le B$ for some per-letter cost function $b:\calX\to [0,\infty)$ and cost constraint $B>0$. In Definition~\ref{def:input_dist}, the {\em distribution} of the channel input $X^n$, induced by $M_n$ and $P_{X^n|M_n}$, is required to be exactly equal to $\pi_X^n$. To satisfy this constraint, a {\em stochastic} encoder is required.  
The present authors showed that the channel capacity $C(\pi_{X})$ with input distribution $\pi_{X}$ is
equal to GKW's common information of $\pi_{XY}$~\cite{YuTan2019}. 

\begin{theorem} \label{thm:capacity} For any channel $\pi_{Y|X} \in \calP(\calY|\calX)$,
\begin{equation}
C(\pi_{X})=C_{\mathrm{GKW}}(\pi_{XY}).
\end{equation}
\end{theorem}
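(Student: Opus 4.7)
My plan is to prove both inequalities separately using the tools already developed in the monograph. For the converse ($C(\pi_X) \le C_{\mathrm{GKW}}(\pi_{XY})$), I would start from any sequence of stochastic $(n,R)$-codes $(P_{X^n|M_n}, P_{\hat M_n|Y^n})$ achieving vanishing average error $P_e^{(n)} \to 0$ with $X^n$ having distribution exactly $\pi_X^n$. The crucial Markov-chain observation is $M_n - X^n - Y^n - \hat M_n$, since the encoder does not see $Y^n$ and the decoder is a function of $Y^n$. Since $\Pr(M_n \ne \hat M_n) \le P_e^{(n)}$, Lemma~\ref{lem:csiszar} (applied with $U = M_n$, $V = \hat M_n$) yields a function $h_n : \calW^n \to [2^{nR}]$ such that $\Pr(M_n \ne h_n(W^n)) \le \delta(P_e^{(n)})$, where $W^n = f^*(X^n) = g^*(Y^n)$ is the common part and $\delta(\cdot)$ depends only on $\pi_{XY}$ with $\delta(\eps) \downarrow 0$. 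Applying Fano's inequality to $(M_n, h_n(W^n))$ gives $H(M_n \mid h_n(W^n)) \le 1 + nR \cdot \delta(P_e^{(n)})$, and since $H(h_n(W^n)) \le H(W^n) = n H(W) = n C_{\mathrm{GKW}}(\pi_{XY})$,
\begin{equation}
nR = H(M_n) \le H(h_n(W^n)) + H(M_n \mid h_n(W^n)) \le n C_{\mathrm{GKW}}(\pi_{XY}) + 1 + nR \cdot \delta(P_e^{(n)}).
\end{equation}
Dividing by $n$ and letting $n \to \infty$ yields $R \le C_{\mathrm{GKW}}(\pi_{XY})$.

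For the achievability ($C(\pi_X) \ge C_{\mathrm{GKW}}(\pi_{XY})$), fix $R < C_{\mathrm{GKW}}(\pi_{XY}) = H(W)$ and let $(f^*, g^*)$ be the optimal pair of functions attaining the maximization in~\eqref{eq:GKWGKW_CI}. The idea is to transmit information via the sequence $W^n = (f^*(X_1), \ldots, f^*(X_n))$, which the decoder can reconstruct exactly from $Y^n$. I would construct a coupling $P_{W^n M_n}$ with marginals $\pi_W^n$ and $\mathrm{Unif}[2^{nR}]$ by solving a balanced transportation problem: assign nonnegative masses $q_{w^n, m}$ with $\sum_m q_{w^n, m} = \pi_W^n(w^n)$ and $\sum_{w^n} q_{w^n, m} = 2^{-nR}$. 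Using a greedy filling of cells $B_1, B_2, \ldots, B_{2^{nR}}$ of prescribed mass $2^{-nR}$, at most one $w^n$ needs to be ``split'' between consecutive cells, yielding at most $2^{nR}$ split sequences. Choosing the order so atypical $w^n$ are processed first and using $\pi_W^n(w^n) \le 2^{-n(H(W) - \eps)}$ for typical $w^n$, the total $\pi_W^n$-probability of split sequences is at most $\delta_n + 2^{nR} \cdot 2^{-n(H(W) - \eps)}$, which vanishes since $R < H(W)$.

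The stochastic encoder operates as follows: given $M_n = m$, it first samples $W^n$ from the conditional distribution $P_{W^n \mid M_n}(w^n \mid m) = q_{w^n, m} \cdot 2^{nR}$, then samples $X^n$ componentwise from $\pi_{X\mid W}^n(\cdot \mid W^n)$. A direct computation (using that $w^n = f^*(x^n)$ is determined by $x^n$) confirms that the induced marginal of $X^n$ is \emph{exactly} $\pi_X^n$, meeting the exact marginal requirement. The decoder observes $Y^n$, computes $\hat W^n = g^*(Y^n)$ (which equals $W^n$ almost surely since $f^*(X) = g^*(Y)$), and decodes $\hat M_n$ as the index of the cell $B_m$ containing $\hat W^n$, picking arbitrarily for split sequences. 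The error probability is upper-bounded by the probability mass of split sequences, which vanishes as $n \to \infty$.

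The main obstacle is the \emph{exact} marginal constraint $P_{X^n} = \pi_X^n$ in Definition~\ref{def:input_dist}, which is much more stringent than requiring approximate equality in TV or relative entropy (as in Wyner's common information). This forces the careful transportation-plan construction in the achievability, rather than the more standard i.i.d.\ or soft-covering arguments; the exact balance of cell probabilities is delicate because typical atoms carry positive mass. The condition $R < H(W)$ is precisely what keeps the total ``leftover'' split mass vanishing, so the interplay between exact distribution matching and vanishing error probability is the technical heart of the proof.
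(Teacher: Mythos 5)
Your converse follows the same route the paper sketches (Lemma~\ref{lem:csiszar} to reduce to guessing the common part $W^n$, then Fano); you simply spell out the details the paper omits, and they check out. The interesting divergence is in the achievability. Both proofs agree on the high-level plan — transmit only through the common part $W^n = f^*(X^n) = g^*(Y^n)$ and engineer a coupling of $\pi_W^n$ with $\mathrm{Unif}[2^{nR}]$ so that $W^n$ determines $M_n$ with high probability — but the paper gets there by invoking the maximal guessing coupling equality (Lemma~\ref{thm:max_guess_coup}) to reduce the problem to distribution approximation (intrinsic randomness), and then cites Vembu--Verd\'u: if $R < H_\pi(W)$, a function $\varphi_n$ exists with $|\mathrm{Unif}[2^{nR}] - \pi_{\varphi_n(W^n)}| \to 0$. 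You instead \emph{build} the coupling $q_{w^n,m}$ directly by a greedy interval-packing argument: process $w^n$ in an order that handles atypical sequences first, cut the cumulative mass into $2^{nR}$ equal bins, observe that at most $2^{nR}-1$ sequences straddle a bin boundary, and bound their total $\pi_W^n$-mass by (atypical probability) $+\,2^{nR}\cdot 2^{-n(H(W)-\epsilon)} \to 0$ for $R < H(W)$. This is essentially an unpacking of the Vembu--Verd\'u achievability proof combined with the obvious half of the coupling equality (any explicit coupling is at least as good as the optimal one), so it is logically equivalent, but it buys you a self-contained argument that doesn't require importing either lemma; the paper's route buys modularity and reuse of machinery introduced elsewhere in the monograph. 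Your check that $P_{X^n}=\pi_X^n$ exactly — using $\sum_m P_{M_n}(m)P_{W^n|M_n}(w^n|m)=\pi_W^n(w^n)$ and then $\sum_{w^n}\pi_W^n(w^n)\pi_{X|W}^n(x^n|w^n)=\pi_X^n(x^n)$ — is precisely the step that justifies the coupling viewpoint (which the paper also emphasizes), and it is correct.
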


From this theorem, we deduce that 
\begin{equation}
C(\pi_{X})\leq I_{\pi}(X;Y)\leq C^{*}:=\max_{P_{X} \in \calP(\calX)}I(P_X, \pi_{Y|X}),
\end{equation}
where $C^{*}=C^*(\pi_{Y|X})$ denotes the Shannon capacity  of the channel $\pi_{Y|X}$ (i.e., the channel capacity
without the input distribution constraint). This channel coding problem can in fact
be reinterpreted as a {\em randomness extraction} problem. Given a bivariate
source $(X,Y)\sim\pi_{XY}$, we use two stochastic maps $P_{M_n|X^{n}}\in \calP([2^{nR} ] |\calX^n)$
and $P_{\hat{M}_{n}|Y^{n}}\in\calP( [2^{nR} ]| \calY^n  )$ to generate
a uniform random variable $M_{n}$ and an arbitrary random variable
$\hat{M}_{n}$  such that $\Pr(\hat{M}_{n}\ne M_n)\to0$ as $n\to\infty$.
We aim to maximize the rate $R$ of $M_n$. This variation  of the randomness
extraction problem differs from limiting case in which $\eps \downarrow 0$  (vanishing error probability) in Definition~\ref{def:GKW_CI}
 in Section~\ref{sec:GKW_CI} in two aspects. Firstly, the
maps in the channel coding with input distribution constraint problem are \emph{stochastic}, while
 the maps in the distributed randomness extraction problem in Section~\ref{sec:GKW_CI} are \emph{deterministic}.
Secondly, the output $M_n$ from $P_{M_n|X^n}$  here is a \emph{uniform} random
variable, while the output $f(X^{n})$ is not   necessarily uniform.
In spite of these two differences, we observe that the 
maximum achievable rates of the extracted randomnesses
for these two problems coincide, and are both equal
to $C_{\mathrm{GKW}}(\pi_{XY})$. 

\begin{proof}[Proof of Theorem \ref{thm:capacity}]   The inequality
$C(\pi_{X})\le C_{\mathrm{GKW}}(\pi_{XY})$, which represents the converse,
can be proved by combining Lemma~\ref{lem:csiszar} and Fano's inequality,
just as in the proof of Theorem~\ref{thm:GKW_CI}. We omit the details
here and refer the interested reader to \citet[Section~VI]{YuTan2019}.  

We next prove the more interesting part $C(\pi_{X})\ge C_{\mathrm{GKW}}(\pi_{XY})$, which represents the achievability.
Observe that the distributions of $X^{n}$ and $M_{n}$ are respectively
$\pi_{X}^{n}$ and $\mathrm{Unif}[2^{nR}]$, both of which are given.
Hence, designing a stochastic map $P_{X^{n}|M_{n}} \in \calP(\calX^n|[2^{nR}])$
(or $P_{M_{n}|X^{n}} \in \calP(  [2^{nR}]|\mathcal{X}^{n})$) is equivalent to
designing a {\em coupling} (cf.\ Section~\ref{sec:coupling})  of $\pi_{X}^{n}$ and $\mathrm{Unif}[2^{nR}]$.
On the other hand, let $U$ be the common part of $(X,Y)\sim\pi_{XY}$.
By definition,    $C_{\mathrm{GKW} }( \pi_{XY} ) = H ( U )$. Moreover, $U^n$, which corresponds to the common part of $X^n$ and $Y^n$,  can be generated
from $X^n$ and $Y^n$ individually. 
To prove that $C(\pi_{X})\ge C_{\mathrm{GKW}}(\pi_{XY})$,
it suffices to construct a  sequence of couplings  $\{P_{U^nM_n}\}_{n\in\bbN}$ of the distributions of $U^{n}$ and $M_{n}$
such that 
\begin{equation}
\lim_{n\to\infty}\;\min_{\varphi:\mathcal{U}^{n}\to[2^{nR}]}\;\Pr\big( M_{n}\neq\varphi(U^{n})\big) =0.\label{eq:GKWerrMW}
\end{equation}
 In other words, we only utilize the common part $U^{n}$ of $X^{n}$ and $Y^{n}$ to transmit the message.
To this end, we leverage the maximal guessing coupling equality in Lemma \ref{thm:max_guess_coup}. 
%
%

The minimization on the right-hand side of \eqref{eq:max_guess_coup} is termed the \emph{distribution approximation} or \emph{random number generation} problem  \cite[Chapter~2]{Han10}, in which a random variable $X \sim P_{X}$
is used to simulate another random variable $Y \sim P_{Y}$ using a
function $\varphi:{\cal X}\to{\cal Y}$ such that  the TV distance
between the distribution $P_{\varphi(X)}$ of the generated random variable
$\varphi(X)$ and the target distribution $P_{Y}$ is  minimized. When $Y$ is uniform, this problem reduces to the {\em intrinsic randomness} problem  in which a well-known result 
due to \citet{vembu1995generating} 
is the following. For an i.i.d.\ source sequence  $X^n\sim P_{X}^{n}$ and the uniform distribution
$\mathrm{Unif}[2^{nR}]$, if $R<H(X)$, 
\begin{equation}
\lim_{n\to\infty}\;\min_{\varphi_n : \calX^n\to [2^{nR}]}\;\big|\mathrm{Unif}[2^{nR}]-P_{\varphi_n(X^{n})}\big|=0.
\end{equation}
Hence,
in our setting,  with $U^n\sim \pi_{U}^n$ and $R<H_\pi(U)$, 
\begin{equation}
\lim_{n\to\infty}\;\min_{\varphi_n: \calU^n\to [2^{nR}]}\; \big|\mathrm{Unif}[2^{nR}]-\pi_{\varphi_n(U^{n})}\big|=0.
\end{equation}
Combining this with Lemma~\ref{thm:max_guess_coup} yields that if $R<H_{\pi}(U)$, then 
\begin{equation}
\lim_{n\to\infty}\max_{ P_{U^{n}M_{n}} \in \calC( \pi_{U}^{n},\mathrm{Unif}[2^{nR}])}\; \max_{\varphi_n:\calU^n\to [2^{nR}]}\; \Pr\big(M_n=\varphi_n( U^n )\big)=1. \label{eqn:cc_error}
\end{equation}
In other words, if $R<H_\pi(U)$, there exists a sequence of couplings $\{ P_{U^{n}M_{n}} \}_{n\in\bbN}$ of $\pi_{U}^{n}$
and $\mathrm{Unif}[2^{nR}]$ such that \eqref{eq:GKWerrMW} holds. Let
$\varphi_{n}^{*} : \calU^n\to[2^{nR}]$ be any function that attains $\max_{\varphi_{n}}\Pr ( M_{n}=\varphi_{n}(U^{n}))$
for $(U^{n},M_{n})\sim P_{U^{n}M_{n}}$. Now define the encoder as  
\begin{equation}
P_{X^n|M_n}(x^n|m):=\sum_{u^{n} \in \calU^n}P_{U^{n}|M_{n}}(u^{n}|m)P_{X|U}^{n}(x^n|u^{n}), \label{eqn:cc_enc}
\end{equation}
and the decoder as
\begin{equation}
P_{\hatM_n|Y^n} (m|y^n) := \sum_{u^{n}\in \calU^n}P_{U|Y}^{n}(u^{n}|y^n)\;\bone\big\{ m=\varphi_{n}^{*}(u^{n})\big\}.
\end{equation}
See the coding scheme in Fig.~\ref{fig:cc_ach}. Then, as a result of~\eqref{eqn:cc_error}--\eqref{eqn:cc_enc}, the constraints $X^{n}\sim\pi_{X}^{n}$ and
$\Pr(\hat{M}_{n}\ne M_{n})\to0$ as $n\to\infty$ are  respectively satisfied, which implies that 
$C(\pi_{X})\ge H_{\pi}(U)=C_{\mathrm{GKW}}(\pi_{XY})$.
\end{proof}

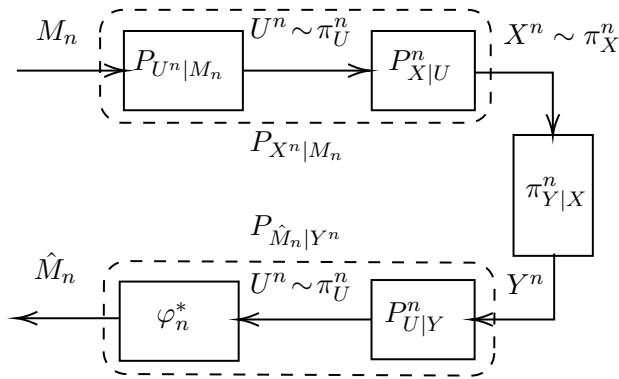
\begin{figure}[!ht]
\centering 
\tikzset{every picture/.style={line width=0.75pt}} 

\begin{tikzpicture}[x=0.75pt,y=0.75pt,yscale=-1,xscale=1]

\draw   (357.5,722.17) -- (357.5,783.17) -- (317.5,783.17) -- (317.5,722.17) -- cycle ;
\draw   (121,669.67) -- (181,669.67) -- (181,709.67) -- (121,709.67) -- cycle ;
\draw    (67,689.67) -- (120,689.67) ;
\draw [shift={(122,689.67)}, rotate = 180] [color={rgb, 255:red, 0; green, 0; blue, 0 }  ][line width=0.75]    (10.93,-3.29) .. controls (6.95,-1.4) and (3.31,-0.3) .. (0,0) .. controls (3.31,0.3) and (6.95,1.4) .. (10.93,3.29)   ;
\draw    (298,690.67) -- (337.33,690.67) -- (337.33,720.67) ;
\draw [shift={(337.33,722.67)}, rotate = 270] [color={rgb, 255:red, 0; green, 0; blue, 0 }  ][line width=0.75]    (10.93,-3.29) .. controls (6.95,-1.4) and (3.31,-0.3) .. (0,0) .. controls (3.31,0.3) and (6.95,1.4) .. (10.93,3.29)   ;
\draw   (246,669.67) -- (298,669.67) -- (298,709.67) -- (246,709.67) -- cycle ;
\draw    (181,689.67) -- (243,689.67) ;
\draw [shift={(245,689.67)}, rotate = 180] [color={rgb, 255:red, 0; green, 0; blue, 0 }  ][line width=0.75]    (10.93,-3.29) .. controls (6.95,-1.4) and (3.31,-0.3) .. (0,0) .. controls (3.31,0.3) and (6.95,1.4) .. (10.93,3.29)   ;
\draw  [dash pattern={on 4.5pt off 4.5pt}] (109,670.4) .. controls (109,664.1) and (114.1,659) .. (120.4,659) -- (294.6,659) .. controls (300.9,659) and (306,664.1) .. (306,670.4) -- (306,704.6) .. controls (306,710.9) and (300.9,716) .. (294.6,716) -- (120.4,716) .. controls (114.1,716) and (109,710.9) .. (109,704.6) -- cycle ;
\draw   (118.88,796.15) -- (178.88,796.15) -- (178.88,836.15) -- (118.88,836.15) -- cycle ;
\draw    (338,782) -- (338,815.29) -- (298.14,815.29) ;
\draw [shift={(296.14,815.29)}, rotate = 360] [color={rgb, 255:red, 0; green, 0; blue, 0 }  ][line width=0.75]    (10.93,-3.29) .. controls (6.95,-1.4) and (3.31,-0.3) .. (0,0) .. controls (3.31,0.3) and (6.95,1.4) .. (10.93,3.29)   ;
\draw   (245.88,795.37) -- (297.88,795.37) -- (297.88,835.37) -- (245.88,835.37) -- cycle ;
\draw    (246.14,815.29) -- (180.14,815.29) ;
\draw [shift={(178.14,815.29)}, rotate = 360] [color={rgb, 255:red, 0; green, 0; blue, 0 }  ][line width=0.75]    (10.93,-3.29) .. controls (6.95,-1.4) and (3.31,-0.3) .. (0,0) .. controls (3.31,0.3) and (6.95,1.4) .. (10.93,3.29)   ;
\draw  [dash pattern={on 4.5pt off 4.5pt}] (307.78,832.26) .. controls (307.75,838.55) and (302.62,843.63) .. (296.33,843.6) -- (122.13,842.81) .. controls (115.83,842.78) and (110.75,837.66) .. (110.78,831.36) -- (110.94,797.16) .. controls (110.96,790.87) and (116.09,785.79) .. (122.39,785.81) -- (296.59,786.6) .. controls (302.88,786.63) and (307.96,791.76) .. (307.93,798.06) -- cycle ;
\draw    (118.4,814.66) -- (68.4,814.66) ;
\draw [shift={(66.4,814.66)}, rotate = 360] [color={rgb, 255:red, 0; green, 0; blue, 0 }  ][line width=0.75]    (10.93,-3.29) .. controls (6.95,-1.4) and (3.31,-0.3) .. (0,0) .. controls (3.31,0.3) and (6.95,1.4) .. (10.93,3.29)   ;

\draw (311,664.67) node [anchor=north west][inner sep=0.75pt]    {$X^{n} \sim \pi_{X}^{n}$};
\draw (312.43,790.97) node [anchor=north west][inner sep=0.75pt]    {$Y^{n}$};
\draw (75,662.67) node [anchor=north west][inner sep=0.75pt]    {$M_{n}$};
\draw (322,742) node [anchor=north west][inner sep=0.75pt]    {$\pi_{Y|X}^{n}$};
\draw (124,678) node [anchor=north west][inner sep=0.75pt]    {$P_{U^{n} |M_{n}}$};
\draw (74,779.67) node [anchor=north west][inner sep=0.75pt]    {$\hat{M}_{n}$};
\draw (183,661.67) node [anchor=north west][inner sep=0.75pt]    {$U^{n} \!\sim\! \pi_{U}^{n}$};
\draw (253,678) node [anchor=north west][inner sep=0.75pt]    {$P_{X |U}^{n}$};
\draw (183,790.21) node [anchor=north west][inner sep=0.75pt]  [rotate=-0.1]  {$U^{n} \!\sim\! \pi_{U}^{n}$};
\draw (250.9,805) node [anchor=north west][inner sep=0.75pt]  [rotate=-0.1]  {$P_{U|Y}^{n}$};
\draw (136,805) node [anchor=north west][inner sep=0.75pt]    {$\varphi_{n}^{*}$};

\draw (183,719) node [anchor=north west][inner sep=0.75pt]    {$P_{X^n | M_n}$};
\draw (183,761) node [anchor=north west][inner sep=0.75pt]    {$P_{\hatM_n | Y^n}$};

\end{tikzpicture}
\caption{\label{fig:cc_ach}A channel coding scheme}
\end{figure}

\section{Generalizations and Applications}\label{sec:gen_GKW}
We conclude this section by introducing  several generalizations of GKW's common information. The expression of GKW's common information in~\eqref{eq:GKW-3} implies that it can also written as the maximum of the mutual information $I(X;U)$ over all distributions $P_{UXY}$ such that $P_{XY}=\pi_{XY}$, $U-X-Y$, and $H(U|Y)=0$. If we relax the constraint $H(U|Y)=0$ to $H(U|Y)\le \delta$ for a given $\delta>0$, then we arrive at  the \emph{approximate  GKW's common information} of $\pi_{XY}$,  namely, 
\begin{equation}
C_{\mathrm{GKW}}^{(\delta)}(\pi_{XY}) := \max_{P_{UXY}:P_{XY}=\pi_{XY}, U-X-Y,H(U|Y) \le  \delta} I(X;U).
\end{equation}
This quantity was proposed and used by \citet{salamatian2020approximate} to characterize an  achievability result for    zero-error coding   in the  \emph{distributed lossless compression with helper} \cite{sw73} problem. 

\citet{yu2016generalized} introduced another generalization of GKW's common information. Let $U$ be  the common part of $X$ and $Y$. Then, for every $u\in\calU$,  on the event $\{U=u\}$, the maximal correlation of $X$ and $Y$, i.e., the maximal correlation of $(X',Y')\sim \pi_{XY|U=u}$ which is denoted as $\rho_{\rmm}(X;Y|U=u)$, is strictly less than~$1$. This is because, otherwise, for each $u\in\calU$, one can extract a common part $V_u$ of $(X',Y')\sim \pi_{XY|U=u}$   such that $H(V_U|U)>0$. Note that  $(U,V_U)$   also forms a common part of $X$ and $Y$, and moreover, $H(U,V_U)>H(U)$. This contradicts   the assumption that $U$ is the common part of $X$ and $Y$. 
This inspires \citet{yu2016generalized}  to provide another characterization of GKW's common information as follows. 
For each $\beta \in [0,1]$, define the \emph{information-correlation function} as 
\begin{equation}
C_{\mathrm{IC}}^{(\beta)}(\pi_{XY}) := \max_{P_{UXY}:P_{XY}=\pi_{XY}, \rho_{\rmm}(X;Y|U) \le  \beta} I(XY;U) ,  
\end{equation}
where $\rho_{\rmm}(X;Y|U):=\sup_{u\in\calU}\rho_{\rmm}(X;Y|U=u)$ denotes the {\em conditional maximal correlation} of $X$ and $Y$ given $U$. By the support lemma \cite{elgamal},   it suffices to consider a  variable $U$ with alphabet size $|\calU| \le  |\calX ||\calY| + 1$.  
Then, GKW's common information can be expressed as 
\begin{equation}
C_{\mathrm{GKW}}(\pi_{XY})= \lim_{\beta\uparrow 1}C_{\mathrm{IC}}^{(\beta)}(\pi_{XY}).
\end{equation}
If we consider  the other end point $\beta=0$, then the constraint in the definition of the  information-correlation function reduces to $\rho_{\rmm}(X;Y|U) = 0$, which is   equivalent to $X-U-Y$. Hence, we recover Wyner's common information from the information-correlation function, i.e., 
\begin{equation}
C_{\mathrm{IC}}^{(0)}(\pi_{XY})=C_{\mathrm{W}}(\pi_{XY}).
\end{equation}
Thus the information-correlation  function $C_{\mathrm{IC}}^{(\beta)}$ interpolates between GKW's and Wyner's common information as $\beta$ decreases from~$1$ to~$0$. 
The generalization of Wyner's common information by \citet{GastparSuha}, also given in \eqref{eqn:relWyn}, is defined in the same spirit as $C_{\mathrm{IC}}^{(\beta)}$, in which  the conditional maximal correlation in the constraint is replaced by the conditional mutual information. 

The distributed common randomness extraction problem formulated by \citet{gacs1973common} is a type of key agreement or key generation problem, in which interactive communication is not allowed and also secrecy is not considered. 
These two assumptions are usually not applicable to practical secret key agreement systems.  To ameliorate these limitations, \citet{csiszar2000common} generalize GKW's common information  to the setting  in which communication is allowed, 
and moreover, a helper  assists  the extractors to extract a higher  rate of common randomness   from the sources. 
Besides, they also consider another setting in which the communication can be observed by a wiretapper, and the secrecy is measured by certain information-theoretic quantities. The latter setting is known as the \emph{secret key agreement problem}. The regions of achievable rate tuples for these two settings are characterized in terms of certain mutual information quantities, which recover  GKW's common information as extreme cases. 
Further generalizations of GKW's common information  to more complicated networks have also been investigated in the literature; see for example the comprehensive surveys by  \citet{sudan2020}, \citet{liang2009information}, and \citet{bloch2011physical}. Other interesting quantities defined based on Gray--Wyner system, such as the K\"orner graph entropy \cite{korner1973coding}, the privacy funnel \cite{Makhdoumi2014}, and the  excess functional information \cite{LiElgamal2018},
can be found in \citet{li2017extended}. These quantities can also considered as generalizations of GKW's and Wyner's common information.

\part{Extensions of Wyner's Common Information} \label{part:two}
\chapter{R\'enyi and Total Variation  Common Information}
\label{ch:renyi}

In this section, we extend the notion of Wyner's common information by modifying the discrepancy measure used to quantify the distance between the synthesized distribution $P_{X^n Y^n}$ and  the $n$-fold product of the target distribution $\pi_{XY}^n$. We analyze how the minimum amount of shared randomness in the distributed source simulation problem (the rate of~$M_n$ in Fig.~\ref{fig:source_sim}) changes when we  employ R\'enyi divergences of orders $1+s$  where $s \in [-1,1]\cup\{\infty\}$, their normalized versions, and  the total variation (TV) distance in place of the normalized relative entropy in~\eqref{eqn:norm_re2}.

The reader might naturally wonder what the value is in going beyond the traditional normalized relative entropy. For one, in security problems, one is usually not content with having the {\em normalized} amount of leaked information vanish as the length of the code grows; this is known as {\em weak secrecy}. Systems that satisfy weak secrecy  nevertheless allow an unbounded number of bits to be leaked to a potentially malicious party. This is clearly undesirable. In practical systems, we seek to design codes such that the  {\em unnormalized} amount of leaked information vanishes. This is known as {\em strong secrecy}~\cite{Maurer00, BlochLaneman} in which the average number of bits that is leaked vanishes. Analogously, requiring the normalized relative entropy between  $P_{X^n Y^n}$ and $\pi_{XY}^n$ to vanish is usually not a criterion that is sufficiently stringent. Our objective is to design and analyze codes that drive the unnormalized relative entropy to zero. It turns out that there is typically no  additional cost to satisfy this more stringent criterion compared to the normalized case. 

More importantly, prior to our work that this section is based on \cite{YuTan2018,yu2020corrections}, the unnormalized relative entropy was the {\em strongest} or {\em most stringent} criterion for measuring the discrepancy between $P_{X^n Y^n}$  and $\pi_{XY}^n$. Are there families of divergences that further strengthen the unnormalized relative entropy? It turns out that the answer is yes. Since the R\'enyi divergence $D_{1+s}$ is monotonically non-decreasing in its order $1+s$, if we increase $s$ from zero  to infinity and mandate that $D_{1+s}(P_{X^n Y^n}\|\pi_{XY}^n)$ vanishes, we obtain a family of Wyner-inspired common information measures that strengthens the original Wyner's common information.

En route to proving coding theorems for the common information when the R\'enyi divergence (in both its normalized and unnormalized forms) is employed, we find it convenient to segue to working with the TV distance. Via   Pinsker-like inequalities relating the R\'enyi divergence to the TV distance, results concerning the TV distance can be rather conveniently translated to those for the R\'enyi divergence and {\em vice versa}. 

It would be remiss for us to not mention that the study of common information under R\'enyi divergence measures leads us to one of the main results of this part of the monograph. In particular, we show in Section~\ref{ch:exact} that the R\'enyi common information of order $\infty$ is exactly the same as the so-called exact common information. This allows us to interpret the latter quantity in a whole new different light, thus providing a pathway to computing it and showing that the exact common information can be strictly larger than Wyner's common information for some joint sources. 

Finally, it is worth noting that it is quite natural to use various divergences to measure the discrepancy between two distributions. For instance, \citet{Hayashi06,Hayashi11} and \citet{YuTan2019_wiretap} respectively used
the KL divergence and the R\'enyi divergence to study the
channel resolvability problem. The latter also applied their
results to study the capacity of the wiretap channel under the condition that the security requirement is measured by these generalized measures. Special instances of R\'enyi entropies and divergences including the relative entropy, the collision entropy, and the min-entropy (corresponding to the R\'enyi divergence of order $\infty$) were used to study various  problems in  probability theory, cryptography, and quantum information recently. See  \citet{Bobkov19}, \citet{dodis2013overcoming}, \citet{iwamoto}, \citet{HayashiTan17}, \citet{TanHayashi18}, and \citet{beigi2014quantum}, and references therein for a non-exhaustive list.

This section starts by  formally defining some useful quantities and stating some of their   properties. These quantities are used to express bounds or exact expressions for the R\'enyi and $\eps$-TV common informations (to be defined in Definition~\ref{def:TVCI})  in terms of single-letter quantities, rendering their computations for a variety  of joint sources feasible. We evaluate the R\'enyi common information for the DSBS. We show that for R\'enyi orders greater than~$1$ (resp.\ in $(0,1]$), the  R\'enyi common information generally exceeds  (resp.\ coincides with) Wyner's  common information. This section, being technical in nature, also provides glimpses of how various proofs are intertwined and hinge on some basic results introduced in Sections \ref{ch:intro} and \ref{ch:wynerCI}. 

\newcommand{\oGamma}{\overline{\Psi}}
\newcommand{\uGamma}{\underline{\Psi}}

\section{Preliminary Definitions}
We commence by stating a couple of definitions  that are used extensively to characterize the common information quantities of interest in this and the following sections. 
\begin{definition} \label{def:max_cross_ent} 
For $s>0$, the {\em maximal $s$-mixed cross entropy} with respect to $\pi_{XY}$ over all couplings of $P_X$ and $P_Y$ is 
\begin{align}
&\rvH_{s}(P_X,P_Y\|\pi_{XY}) \nn\\*
&:= \max_{Q_{XY}\in\calC(P_X,P_Y)}\sum_{x,y}Q_{XY}(x,y)\log\frac{1}{\pi_{XY}(x,y)}+\frac{1}{s}H(Q_{XY}). \label{eqn:maximal_mixed}
\end{align}
When $s=\infty$, the above definition 
reduces to the {\em maximal cross entropy} with respect to $\pi_{XY}$ over all couplings of $P_X$ and $P_Y$, i.e.,
\begin{equation}
  \rvH_\infty(P_X,P_Y\|\pi_{XY})\!:=\!\max_{Q_{XY}\in\calC(P_X,P_Y)}\sum_{x,y}Q_{XY}(x,y)\log\frac{1}{\pi_{XY}(x,y)}.\!\! \label{eqn:maximal_cross_ent0}
\end{equation}
\end{definition}
Some intuition for these quantities can be gleaned by considering the case $s=\infty$ in~\eqref{eqn:maximal_cross_ent0}. Consider a sequence of pairs of marginal types $\big\{(T_X^{(n)}, T_Y^{(n)})\big\}_{n\in\bbN}\subset \calP(\calX)\times \calP(\calY)$ such that $T_X^{(n)} \in\calP_n(\calX)$ converges to $P_X$ and  $T_Y^{(n)}\in\calP_n(\calY)$ converges to $P_Y$ as $n\to\infty$ (in TV distance, for example). The minimum $\pi_{XY}^n$-probability of $(x^n,y^n)$ such that the {\em marginal} types of $x^n$ and $y^n$ are  $T_X^{(n)}$ and $T_Y^{(n)}$  respectively is given by 
\begin{align}
& \min_{T_{x^n} = T_X^{(n)}  ,T_{y^n} =  T_Y^{(n)}  } \pi_{XY}^n\left(  x^n,y^n \right)\nn\\
 &\quad\doteq \exp \bigg(-n \max_{Q_{XY}\in\calC(P_X,P_Y)}\sum_{x,y}Q_{XY}(x,y)\log\frac{1}{\pi_{XY}(x,y)}\bigg)\\
  &\quad= \exp \big(-n \rvH_{\infty}(P_X,P_Y\|\pi_{XY})\big). \label{eqn:intuition_maximal}
\end{align}
The intuitive reason why we consider the {\em minimum} $\pi_{XY}^n$-probability leading to {\em maximal}  cross entropy is that, as alluded to in the introduction of this section, we are considering {\em strengthenings} of Wyner's common information using the discrepancy measures $D_{1+s}$ for $s\in (0,\infty]$. Consequently, the required resolution  or common information rate of~$M_n$ would, in general, need  to be larger than that for Wyner's common information. In fact, it is determined by the {\em minimum} of the $\pi_{XY}^n$-probability of certain type classes. This will be made   clear when we discuss the notion of exact common information in Section~\ref{ch:exact}.

We now state a few properties of the maximal cross-entropy.
\begin{lemma} \label{lem:prop_max_cross_ent}
Let $\pi_{XY}$ be a joint distribution on a finite alphabet $\calX\times\calY$ and with marginals $\pi_X$ and $\pi_Y$. 
\begin{enumerate}
\item We have
\begin{equation}
\rvH_\infty(\pi_X,\pi_Y\|\pi_{XY})\ge H(\pi_{XY}) \label{eqn:prop_max_cross_ent1}
\end{equation}
where equality holds if and only if $\pi_{XY}=\pi_X\pi_Y$. 
\item Assume that $\supp(\pi_{XY})=\calX\times\calY$. Then for any pair of distributions $P_X$ and $P_Y$ such that $\supp(P_X)=\calX$ and $\supp(P_Y)=\calY$,  we have
\begin{equation}
\rvH_\infty(P_X,P_Y\|\pi_{XY})\ge\sum_{x,y}P_X(x)P_Y(y)\log \frac{1}{\pi_{XY}(x,y)} \label{eqn:prop_max_cross_ent2}
\end{equation}
where equality holds if and only if $\pi_{XY}=\pi_X\pi_Y$. 
\end{enumerate}
\end{lemma}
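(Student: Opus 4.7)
The plan is to reduce both parts of the lemma to evaluating the linear functional $Q_{XY}\mapsto-\bbE_{Q}[\log\pi_{XY}(X,Y)]$ on a cleverly chosen feasible coupling in $\calC(P_X,P_Y)$, and to characterise when this coupling is optimal. The stated lower bounds are immediate: in part~(1), substituting $Q_{XY}=\pi_{XY}\in\calC(\pi_X,\pi_Y)$ yields $-\bbE_{\pi_{XY}}[\log\pi_{XY}]=H(\pi_{XY})$; in part~(2), substituting the product coupling $Q_{XY}=P_XP_Y\in\calC(P_X,P_Y)$ yields precisely the right-hand side $\sum_{x,y}P_X(x)P_Y(y)\log(1/\pi_{XY}(x,y))$.

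For the ``if'' direction of both equality claims, I would exploit separability. When $\pi_{XY}=\pi_X\pi_Y$, the log-density splits as $-\log\pi_{XY}(x,y)=-\log\pi_X(x)-\log\pi_Y(y)$, so for every $Q\in\calC(P_X,P_Y)$ the expectation $-\bbE_Q[\log\pi_{XY}]$ reduces to $\sum_xP_X(x)\log(1/\pi_X(x))+\sum_yP_Y(y)\log(1/\pi_Y(y))$, which depends only on $(P_X,P_Y)$ and is therefore constant over the coupling polytope. In part~(1) this constant is $H(\pi_X)+H(\pi_Y)=H(\pi_{XY})$, and in part~(2) it matches the right-hand side, yielding equality in each case.

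The ``only if'' direction is the main obstacle. For part~(1) I would also test the objective on the product coupling $\pi_X\pi_Y\in\calC(\pi_X,\pi_Y)$. Either $\pi_X\pi_Y\not\ll\pi_{XY}$, in which case its cross entropy is $+\infty$ and hence $\rvH_\infty=\infty>H(\pi_{XY})$ (and necessarily $\pi_{XY}\ne\pi_X\pi_Y$); or $\pi_X\pi_Y\ll\pi_{XY}$, in which case
\begin{equation}
\rvH_\infty(\pi_X,\pi_Y\|\pi_{XY})\geq-\bbE_{\pi_X\pi_Y}[\log\pi_{XY}]=H(\pi_{XY})+I_\pi(X;Y)+D(\pi_X\pi_Y\|\pi_{XY}).
\end{equation}
Whenever $\pi_{XY}\ne\pi_X\pi_Y$, both $I_\pi(X;Y)$ and $D(\pi_X\pi_Y\|\pi_{XY})$ are strictly positive, so this lower bound already strictly exceeds $H(\pi_{XY})$.

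For part~(2) I would use a first-order perturbation argument enabled by the full-support hypotheses. The subspace of signed measures on $\calX\times\calY$ with vanishing marginals is, by elementary linear-algebra duality, precisely the annihilator of additively separable functions $(x,y)\mapsto a(x)+b(y)$. Since $\pi_{XY}$ has full support, $-\log\pi_{XY}$ is finite-valued, and it is separable if and only if $\pi_{XY}=\pi_X\pi_Y$. Hence, whenever $\pi_{XY}\ne\pi_X\pi_Y$, there exists a zero-marginal signed measure $\Delta$ with $\sum_{x,y}\Delta(x,y)\bigl[-\log\pi_{XY}(x,y)\bigr]>0$. Because $P_X(x)P_Y(y)>0$ for every $(x,y)$, the perturbation $Q_\epsilon:=P_XP_Y+\epsilon\Delta$ lies in $\calC(P_X,P_Y)$ for all sufficiently small $\epsilon>0$ and satisfies $-\bbE_{Q_\epsilon}[\log\pi_{XY}]>-\bbE_{P_XP_Y}[\log\pi_{XY}]$, certifying the strict inequality. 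The crux of the whole lemma is thus the dichotomy between separable and non-separable log-densities---resolved concretely via the product-coupling test in part~(1) and via an interior perturbation in part~(2).
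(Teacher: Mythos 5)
Your proof is correct, and the paper does not provide an inline proof of this lemma (it is deferred to the references cited in this section), so there is no in-paper argument to compare against. Your reduction to evaluating the linear functional $Q\mapsto-\bbE_{Q}[\log\pi_{XY}]$ on specific feasible couplings is sound: the stated lower bounds follow by plugging in $\pi_{XY}$ and $P_XP_Y$ respectively, the ``if'' direction follows because a separable log-density makes the objective constant on the entire coupling polytope, the ``only if'' direction of part~(1) follows from the product-coupling identity $-\bbE_{\pi_X\pi_Y}[\log\pi_{XY}]=H(\pi_{XY})+I_\pi(X;Y)+D(\pi_X\pi_Y\|\pi_{XY})$ together with the degenerate non-absolutely-continuous case, and the ``only if'' direction of part~(2) follows from the finite-dimensional biduality between zero-marginal signed measures and additively separable functions, combined with the full-support hypothesis that makes a small signed perturbation feasible. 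Two small observations: in part~(1) you only need one of $I_\pi(X;Y)>0$ or $D(\pi_X\pi_Y\|\pi_{XY})>0$ to obtain strictness (both hold, so your argument is correct but slightly stronger than needed); and the part~(2) perturbation argument genuinely requires the full-support hypothesis on $P_X$, $P_Y$, and $\pi_{XY}$, which is why part~(1) (where only the marginals of $\pi_{XY}$ appear as constraints and no support assumption is made) must be handled by the product-coupling identity instead --- you correctly deploy two distinct arguments for what might superficially look like the same claim.
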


We now provide a couple of examples to show that $\rvH_\infty$ can be calculated in closed form for some archetypal joint sources.
\begin{example} \label{ex:dsbs_renyi}
Consider the DSBS in Section~\ref{sec:dsbs}. Fix $P_X =\mathrm{Bern}(a)$ and $P_Y =\mathrm{Bern}(b)$ where $a,b\in [0,1]$. Then 
\begin{align}
\rvH_\infty(P_X,P_Y\|\pi_{XY}) &=\log\frac{1}{\alpha}+\big(\min\{a,\barb\}+\min\{\bara,b\} \big)\log\frac{\alpha}{\beta}\\
&=\log\frac{1}{\alpha}+\min\{a+b,\bara+\barb\} \log\frac{\alpha}{\beta},
\end{align}
where $\alpha=(1-p)/2$ and $\beta =p/2$.
Furthermore, when $P_X$ and $P_Y$ are particularized to $\pi_X=\mathrm{Bern}(1/2)$ and $\pi_Y=\mathrm{Bern}(1/2)$, 
 \begin{equation}
\rvH_\infty(\pi_X,\pi_Y\|\pi_{XY}) =\log\frac{1}{\beta}.
\end{equation}
In contrast, the joint entropy is 
\begin{equation}
H(\pi_{XY}) =2\alpha\log\frac{1}{\alpha} + 2\beta\log\frac{1}{\beta} \le \rvH_\infty(\pi_X,\pi_Y\|\pi_{XY})
\end{equation}
with equality if and only if $p=1/2$, i.e., $\alpha=\beta=1/4$.
\end{example}
\begin{example}
Let $(X,Y)\sim \pi_{XY}$ be a jointly Gaussian source with  $\bbE[X]=\bbE [Y]=0$, $\var(X)=\var(Y)=1$, and  correlation coefficient~$\rho$. Let $P_X=\calN(\mu_X,\sigma_X^2)$ and $P_Y=\calN(\mu_Y,\sigma_Y^2)$. Then, 
\begin{align}
&\rvH_\infty(P_X,P_Y\|\pi_{XY})\nn\\*
&=\log\big(2\pi\sqrt{1-\rho^2}\big)+\frac{1-\rho \big(\min_{P_{XY}\in\calC(P_X,P_Y)}\bbE[XY]\big)}{1-\rho^2}\,\log\rme \\
&=\log\big(2\pi\sqrt{1-\rho^2}\big)+\frac{1+\rho(\sigma_X \sigma_Y-\mu_X\mu_Y)}{1-\rho^2}\,\log\rme,  
\end{align}
where the last equality easily follows from the  condition for equality in the Cauchy--Schwarz inequality. We note that this step is equivalent to computing the Wasserstein distance of order $2$ between $X$ and $-Y$; see \cite[Example~3.2.14]{Rachev98}. Furthermore, if $P_X=\pi_X=\calN(0,1)$   and $P_Y=\pi_Y=\calN(0,1)$ (so $\mu_X=\mu_Y=0$ and $\sigma_X=\sigma_Y=1$), 
\begin{equation}
\rvH_\infty(\pi_X,\pi_Y\|\pi_{XY})=\log\big(2\pi\sqrt{1-\rho^2}\big)+\frac{\log\rme}{1-\rho}.
\end{equation}
In contrast, the joint (differential) entropy of $\pi_{XY}$ is 
\begin{equation}
H(\pi_{XY})=\log\big(2\pi e\sqrt{1-\rho^2}\big)\le \rvH_\infty(\pi_X,\pi_Y\|\pi_{XY}),
\end{equation}
with equality if and only if $\rho=0$, i.e., $(X,Y)\sim \pi_{XY}$ is  a pair of  independent  Gaussian random variables.
\end{example}


The quantities in Definition~\ref{def:max_cross_ent} are used to characterize the following upper and lower bounds on the R\'enyi common information. 
\begin{definition} \label{def:pseudo_CI} 
For $s>0$, define the {\em upper pseudo-common information of order $(1+s)$} as 
\begin{align}
\oGamma_{1+s}(\pi_{XY}) & := \min_{ \substack{P_W P_{X|W} P_{Y|W} :\\ P_{XY}=\pi_{XY}}}-\frac{1+s}{s}H(XY|W)\nn\\
&\qquad\qquad+ \bbE_{ P_W} \big[ \rvH_{s}(P_{X|W }, P_{Y|W }\|\pi_{XY})\big],\label{eqn:Gamma_up}
\end{align}
where the expectation on the second line can be explicitly written as $$\sum_w P_W(w)  \rvH_{s}(P_{X|W =w}, P_{Y|W =w}\|\pi_{XY}).$$  Similarly, define the {\em lower pseudo-common information of order $(1+s)$} as 
\begin{align}
\hspace{-.2in}\uGamma_{1+s}(\pi_{XY}) & := \inf_{ \substack{P_W P_{X|W} P_{Y|W} :\\ P_{XY}=\pi_{XY}}}-\frac{1+s}{s}H(XY|W)\nn\\
&\hspace{-.2in}\qquad+ \inf_{\substack{Q_{WW'} \\ \in \calC(P_W,P_W)}}\bbE_{Q_{WW' }} \big[\rvH_{s}(P_{X|W }, P_{Y|W' }\|\pi_{XY})\big],\label{eqn:Gamma_down}
\end{align}
where the expectation on the second line can be explicitly written as $$\sum_{w,w'}Q_{WW'}(w,w')\rvH_{s}(P_{X|W =w}, P_{Y|W'=w' }\|\pi_{XY}).$$

Also define $\oGamma_{1}(\pi_{XY})$, $\uGamma_{1}(\pi_{XY})$,  $\oGamma_{\infty}(\pi_{XY})$, and $\uGamma_{\infty}(\pi_{XY})$ to be the limits of $\oGamma_{1+s}(\pi_{XY})$ and $\uGamma_{1+s}(\pi_{XY})$ as $s\downarrow 0$ or $s\to\infty$. 
\end{definition}
Observe that these two definitions are rather similar. Indeed, if the inner infimum in \eqref{eqn:Gamma_down} is achieved by a coupling  $Q_{WW'} \in\calC(P_W,P_W)$ such that $Q_{WW'}(w,w')=P_W(w)\mathbbm{1}\{ w=w'\}$ for all $(w,w')\in\calW^2$, then  we have the favorable scenario in which  $\oGamma_{1+s}(\pi_{XY})=\uGamma_{1+s}(\pi_{XY})$ for all $s>0$.  This coupling is known as the  {\em equality coupling}.

Even though the expression in \eqref{eqn:Gamma_up} is somewhat involved, for some special classes of distributions, $\oGamma_{1+s}(\pi_{XY})$ turns out to be equal to Wyner's common information $C_{\Wyner}(\pi_{XY})$ for all $s\in (0,\infty]$. To state the desired result, we now define a hierarchy of product-like distributions.

\begin{definition} \label{def:pseudo_pdt} 
   Consider the following hierarchy of joint distributions.
\begin{enumerate}
\item[(a)]  A {\em product distribution} $\pi_{XY} \in  \calP(\calX\times\calY)$ is one in which  
there exists     functions $\alpha:\calX\to [0,\infty)$ and $\beta:\calY\to [0,\infty)$ such that $\pi_{XY} (x,y)=\alpha(x)\beta(y)$ for all
 $(x,y)\in\calX\times\calY$. In this case, $\alpha(x) = \sum_{y\in\calY}\pi_{XY}(x,y)$ is the marginal of $\pi_{XY}$ on $\calX$ and similarly for $\beta(y)$. The matrix of probabilities corresponding to $\pi_{XY}$ has rank~$1$ and $X$ is independent of $Y$.

\item[(b)]  A  {\em pseudo-product distribution} $\pi_{XY} \in  \calP(\calX\times\calY)$ is one in which there  exists some subset $\calA\subset\calX\times\calY$ such that 
\begin{equation}
\pi_{XY}(x,y) = \left\{ \begin{array}{cc}
\alpha(x)\beta(y) & (x,y)\in\calA\\
0 & \mbox{otherwise}
\end{array} \right.  \label{eqn:pseudo_prod}
\end{equation}
for some functions $\alpha:\calX\to [0,\infty)$ and $\beta:\calY\to [0,\infty)$ such that $\sum_{(x,y)\in\calA}\alpha(x)\beta(y)=1$.   
\item[(c)]   A {\em Wyner-product distribution} $\pi_{XY}\in  \calP(\calX\times\calY)$ is one  in which  there exists a  distribution $P_W P_{X|W} P_{Y|W}$ attaining  the infimum in the definition of $C_{\Wyner}(\pi_{XY})$ in \eqref{eqn:CWyner_inf} such that $\pi_{XY}$ restricted to $\calA_w=\supp(P_{X|W=w}) \times\supp(P_{Y |W=w})$ is a product distribution for all 
 $w \in \supp(P_W)$. In other words,  
\begin{equation}
  \pi_{XY}(x,y|\calA_w) := \frac{\pi_{XY}(x,y)}{\pi_{XY}( \calA_w)} \bone\{(x,y)\in\calA_w\} \label{eqn:wyner_prod}
 \end{equation}
 is a product distribution for all $w\in\supp(P_W)$.  
\end{enumerate}
\end{definition}
It can be seen that a pseudo-product distribution is a Wyner-product distribution. This is because  $\calA_w\subset \calA$ for all $w$, otherwise   there is some $(x,y)\in\calX\times\calY$ such that $P_{XY}(x,y)>0$ but $\pi_{XY}(x,y)=0$. Since $\pi_{XY}$ has the property in \eqref{eqn:pseudo_prod}, so does it on each $\calA_w$.   


Obviously, a product distribution is a pseudo-product distribution (take $\calA$ to be $\calX\times\calY$).  However,
  a pseudo-product distribution need not be a {\em bona fide} product distribution as the next example shows.   Nevertheless, if $\supp(\pi_{XY})$ is a {\em product set} (i.e., a set $\calA$ that can be written as the Cartesian product $\calX'\times\calY'$ where $\calX'\subset\calX$ and $\calY'\subset\calY$), then a pseudo-product distribution is a product distribution. A Venn diagram of these classes of distributions is shown in Fig.~\ref{fig:products}.

\begin{figure}[!ht]
\vspace{-.2in}
\centering
\begin{overpic}[width=.7\textwidth]{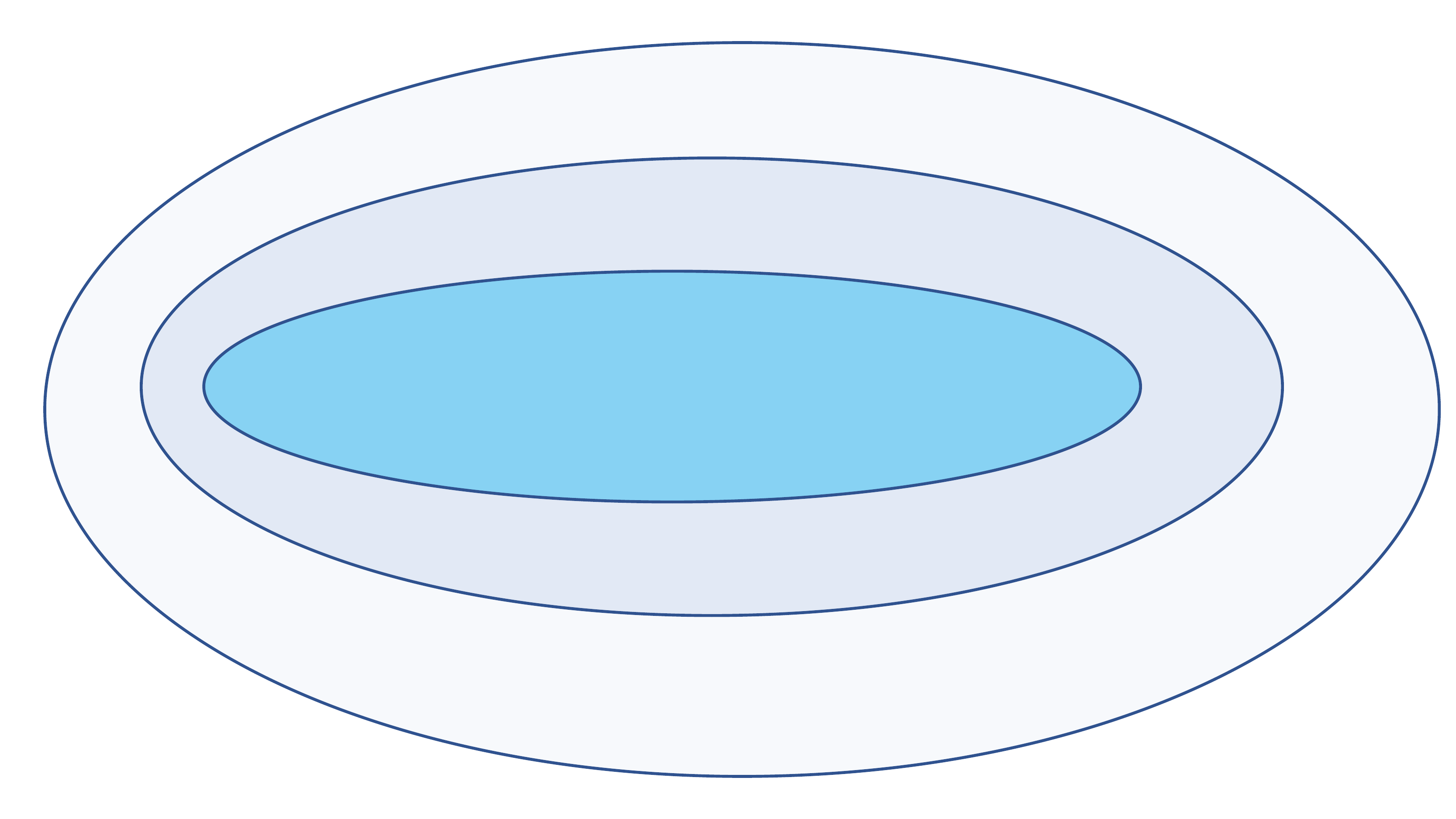}
{\footnotesize
\put(27,9){Wyner-product distributions}
\put(24,18.5){Pseudo-product distributions}
\put(27,29){Product distributions}}
\end{overpic}
\vspace{-.1in}
\caption{Venn diagram of various types of product distributions}
\label{fig:products}
\end{figure}

\begin{example}
Consider the joint distribution supported on $\{0,1\}^2$ with matrix of probabilities given by
\begin{equation}
\pi_{XY}=\frac{1}{\alpha_0\beta_0+\alpha_0\beta_1+ \alpha_1\beta_0}\begin{bmatrix}
\alpha_0\beta_0 & \alpha_0\beta_1 \\ \alpha_1\beta_0 & 0
\end{bmatrix}, \label{eqn:matrix_pseudo}
\end{equation}
where $\alpha_x,\beta_y >0$ for $x,y\in\{0,1\}$.
This is a  pseudo-product distribution but not a product distribution. To show the former statement,  take the set $\calA$ to be $\{0,1\}^2\setminus\{(1,1)\}$ and functions $\alpha(x)\propto\alpha_x$ and $\beta(y) \propto \beta_y$ for $x,y\in \{0,1\}$. For the latter statement, note that since the rank of the matrix in~\eqref{eqn:matrix_pseudo} is not  one, $\pi_{XY}$ is not a product distribution. 
\end{example}
We now state some useful properties of $\oGamma_{1+s}$ and $\uGamma_{1+s}$.
\begin{lemma} \label{lem:prop_Gamma}
The upper and lower pseudo-common information quantities satisfy the following properties.
\begin{enumerate}
\item[(a)] For the optimization in \eqref{eqn:Gamma_up} that defines $\oGamma_{1+s}(\pi_{XY})$, it suffices to restrict the cardinality $|\calW|\le|\calX||\calY|$. 
\item[(b)] The functions $s\mapsto \oGamma_{1+s}(\pi_{XY})$ and $s\mapsto \uGamma_{1+s}(\pi_{XY})$ are non-decreasing in $s > 0$. 
\item[(c)] As $s\downarrow 1$, the following limiting case  holds:
\begin{align}
\uGamma_1(\pi_{XY})\le \oGamma_1(\pi_{XY})=C_{\Wyner}(\pi_{XY}).
\end{align}
\item[(d)] As $s\to\infty$, the following limiting cases  hold:
\begin{equation}\label{eqn:oGamma_infty}
    \begin{aligned}
&\hspace{-.15in} \lim_{s\to\infty}\oGamma_{1+s}(\pi_{XY})   =\oGamma_{\infty}(\pi_{XY})\\
&\hspace{-.1in}=  \min_{ \substack{P_W P_{X|W} P_{Y|W} :\\ P_{XY}=\pi_{XY}}} -H(XY|W) + \bbE_{P_W}\big[ \rvH_\infty(P_{X|W },P_{Y|W  }\|\pi_{XY}) \big], 
\end{aligned}
\end{equation}
and 
\begin{equation} \label{eqn:uGamma_infty}
    \begin{aligned}
&\hspace{-.35in}\lim_{s\to\infty}\uGamma_{1+s}(\pi_{XY})  =\uGamma_{\infty}(\pi_{XY})\\
&\hspace{-.35in}\qquad= \inf_{ \substack{P_W P_{X|W} P_{Y|W} :\\ P_{XY}=\pi_{XY}}}-H(XY|W)\nn\\
&\hspace{-.35in}\qquad\qquad+ \inf_{\substack{Q_{WW'}\\ \in \calC(P_W,P_W)}} \bbE_{Q_{WW'}}\big[ \rvH_\infty(P_{X|W},P_{Y|W'}\|\pi_{XY})\big]. 
\end{aligned}
\end{equation}
\item[(e)] For any $s\in (0,\infty]$, $\oGamma_{1+s}(\pi_{XY})=C_{\Wyner}(\pi_{XY})$ if and only if $\pi_{XY}$ is a Wyner-product distribution.
\end{enumerate}
\end{lemma}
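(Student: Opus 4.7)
\textbf{Proof plan for Lemma~\ref{lem:prop_Gamma}.}

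For part (a), I would invoke the convex cover/support lemma (e.g.\ \cite[Appendix C]{elgamal}). The constraint $P_{XY}=\pi_{XY}$ imposes $|\calX||\calY|-1$ linear conditions on the conditional distribution of $(X,Y)$ given $W$, and preserving the objective (which is a continuous function of $P_{X|W=w}$ and $P_{Y|W=w}$) adds one more linear functional, yielding $|\calW|\le|\calX||\calY|$.

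For part (b), I would reparametrize by $t=1/s$ and show the objective is non-increasing in $t$. Fixing any Markov triple $P_W P_{X|W}P_{Y|W}$, the objective becomes $-(1+t)H(XY|W)+\bbE_W\bigl[\max_{Q\in\calC(P_{X|W},P_{Y|W})}\bigl(\bbE_Q[\log 1/\pi_{XY}]+tH(Q)\bigr)\bigr]$. Differentiating in $t$ via an envelope argument gives $-H(XY|W)+\bbE_W[H(Q^{*}(w))]$, where $Q^{*}(w)$ is an optimal coupling of $P_{X|W=w}$ and $P_{Y|W=w}$. Since $H(Q^{*}(w))\le H(P_{X|W=w})+H(P_{Y|W=w})=H(XY|W=w)$ (the maximum-entropy coupling is the independent one, and $X,Y$ are conditionally independent given $W$), the derivative is $\le 0$, so the objective is nondecreasing in $s$. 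Taking the minimum over $P_WP_{X|W}P_{Y|W}$ preserves this. The same argument goes through for $\uGamma_{1+s}$ once the inner coupling $Q_{WW'}$ is also fixed.

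For part (c), I would analyze the limit $s\downarrow 0$ (equivalently $t\to\infty$). For each $(w)$ the maximizer $Q^{*}_{s}(w)$ of $\bbE_{Q}[\log 1/\pi_{XY}]+(1/s)H(Q)$ is pushed to the independent coupling $P_{X|W=w}P_{Y|W=w}$, and a straightforward expansion gives $\rvH_{s}(P_{X|W=w},P_{Y|W=w}\|\pi_{XY})=\bbE_{P_{X|W=w}P_{Y|W=w}}[\log 1/\pi_{XY}]+(1/s)(H(X|W=w)+H(Y|W=w))+o(1)$. Plugging back, the $(1/s)$ terms cancel against $-(1+1/s)H(XY|W)=-(1+1/s)(H(X|W)+H(Y|W))$ (using $X-W-Y$), leaving $-H(XY|W)+H_{\pi}(XY)=I(XY;W)$. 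Minimizing yields $C_{\Wyner}(\pi_{XY})$. The bound $\uGamma_{1}\le\oGamma_{1}$ follows by taking $Q_{WW'}$ to be the equality coupling in the definition of $\uGamma_{1+s}$. For part (d), both limits are immediate since $(1+s)/s\to 1$ and $(1/s)H(Q)\to 0$ uniformly on the compact coupling polytope, so $\rvH_{s}\to\rvH_{\infty}$ pointwise and the expressions reduce to~\eqref{eqn:oGamma_infty} and~\eqref{eqn:uGamma_infty}.

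Part (e) is the main substance. For the ``if'' direction, suppose $\pi_{XY}$ is Wyner-product with optimal $P_{W}^{*}P_{X|W}^{*}P_{Y|W}^{*}$ and product factorization $\pi_{XY}(x,y)=\pi_{XY}(\calA_{w})P_{X|W=w}^{*}(x)P_{Y|W=w}^{*}(y)$ on $\calA_w$. For any coupling $Q\in\calC(P_{X|W=w}^{*},P_{Y|W=w}^{*})$ (necessarily supported on $\calA_w$), $\bbE_{Q}[\log 1/\pi_{XY}]=\log 1/\pi_{XY}(\calA_{w})+H(P_{X|W=w}^{*})+H(P_{Y|W=w}^{*})$, which is independent of $Q$. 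The coupling that then maximizes $(1/s)H(Q)$ is the product one, with entropy $H(X|W=w)+H(Y|W=w)$. Substituting, the $(1+1/s)H(XY|W)$ terms cancel and the objective collapses to $\bbE_{W}[\log 1/\pi_{XY}(\calA_{w})]$, which is $s$-independent; by part (c) it equals $C_{\Wyner}(\pi_{XY})$. The ``only if'' direction is where the main obstacle lies. The key observation is the chain
\begin{equation*}
\oGamma_{1+s}(\pi_{XY})\ge -\tfrac{1+s}{s}H(XY|W)+\bbE_{W}\bigl[\bbE_{P_{X|W}P_{Y|W}}[\log 1/\pi_{XY}]+\tfrac{1}{s}H(XY|W=w)\bigr]=I(XY;W)\ge C_{\Wyner}(\pi_{XY}),
\end{equation*}
where the first inequality uses the product coupling as a feasible choice in $\rvH_{s}$. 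Equality $\oGamma_{1+s}=C_{\Wyner}$ therefore forces: (i) $P_{W}^{*}P_{X|W}^{*}P_{Y|W}^{*}$ is optimal for $C_{\Wyner}$, and (ii) the product coupling attains the maximum in $\rvH_{s}(P_{X|W=w}^{*},P_{Y|W=w}^{*}\|\pi_{XY})$ for $P_{W}^{*}$-a.e.\ $w$. From (ii), finiteness at the product coupling forces $\calA_{w}:=\supp(P_{X|W=w}^{*})\times\supp(P_{Y|W=w}^{*})\subset\supp(\pi_{XY})$, and the KKT/Lagrangian stationarity conditions for the concave maximization in $\rvH_{s}$ yield $Q(x,y)\propto \pi_{XY}(x,y)^{-s}\,a_{w}(x)\,b_{w}(y)$ at the optimum; setting this equal to $P_{X|W=w}^{*}(x)P_{Y|W=w}^{*}(y)$ then forces $\pi_{XY}(x,y)$ to factorize multiplicatively in $x$ and $y$ on $\calA_{w}$, i.e.\ $\pi_{XY}$ restricted to $\calA_w$ is a product distribution. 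This is exactly the Wyner-product property. The delicate point, and the hardest step, will be a careful handling of the boundary case and of measurability when upgrading the pointwise equality statements to an ``almost every $w$'' statement and concluding that the particular optimizer we chose witnesses the Wyner-product structure.
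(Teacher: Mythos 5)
Your overall plan is sound in the finite-alphabet setting, and since the paper defers the proof of this lemma to external references and only works out part (e) in the special case $s=\infty$ (Lemma~\ref{lem:oGamma_CWy}, whose proof rests on the equality characterization in Lemma~\ref{lem:prop_max_cross_ent}(2)), your treatment supplies a self-contained proof for all $s\in(0,\infty]$. A few remarks on each part. For (a)--(b), the arguments work, though in (b) the envelope/derivative step is more machinery than needed: for each fixed coupling $Q$, the objective is linear in $t=1/s$ with slope $-H(XY|W)+\bbE_W[H(Q(\cdot|W))]\le 0$ by subadditivity plus $X-W-Y$, and a supremum of nonincreasing affine functions is nonincreasing, so you need neither differentiability nor the envelope theorem. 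For (c), you quietly (and correctly) read ``$s\downarrow 1$'' as the typo it is for ``$s\downarrow 0$''; it is worth saying explicitly that the lower bound $\oGamma_{1+s}\ge C_{\Wyner}$ holds for \emph{every} $s>0$ and \emph{every} feasible triple (replace $\rvH_s$ by its value at the product coupling), so no uniformity of your $o(1)$ is required — the $o(1)$ estimate is only needed at the single Wyner-optimal triple to get $\oGamma_1\le C_{\Wyner}$. Parts (d) and the ``if'' direction of (e) are fine; note though that you assert the factorization $\pi_{XY}|_{\calA_w}=\pi_{XY}(\calA_w)P^*_{X|W=w}P^*_{Y|W=w}$, which the Wyner-product definition does \emph{not} guarantee (it only guarantees that the restriction is \emph{some} product $\alpha_w\otimes\beta_w$). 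The argument survives anyway, because the key fact you actually use — that $\bbE_Q[\log 1/\pi_{XY}]$ is the same for every coupling $Q$ of the given marginals — holds for any product factors $\alpha_w,\beta_w$, and one can then check $\bbE_W[\,\cdot\,]=H(XY)$ directly from $P_{XY}=\pi_{XY}$; but as written your identification is an unjustified step. For the ``only if'' direction of (e), your KKT/stationarity route for $s\in(0,\infty)$ is a clean alternative to the paper's approach at $s=\infty$ (which instead exploits the equality condition in Lemma~\ref{lem:prop_max_cross_ent}(2) applied to the restriction to $\calA_w$); it also correctly uses strict concavity to get uniqueness and the relative-interior position of the product coupling to justify stationarity. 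The one place your plan is off the mark is the final paragraph: there is no ``measurability'' issue here — $\calW$ is finite by part (a), so ``$P^*_W$-a.e.\ $w$'' is just ``every $w$ in the support,'' and since each summand $\rvH_s-\text{(product value)}\ge 0$, vanishing of the expectation gives vanishing of every term outright. The genuine loose ends are the unjustified product-factor identification noted above and making the $o(1)$/uniformity logic in (c) explicit, both of which are easily patched.
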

Statement~(a) says that  $\oGamma_{1+s}(\pi_{XY})$ is efficiently computable as there is a cardinality bound on the auxiliary random variable $W$. Statement~(b) is clear and mirrors that of the operational definition of the R\'enyi common information we state later. Statements~(c) and~(d) say that the limit operations (as $s\downarrow 1$ and $s\to\infty$) ``commute'' with the minimizations. Finally, Statement~(e) says that the upper pseudo-common information of any order greater than $1$ is the same as Wyner's common information when the target distribution is a Wyner-product distribution so $\oGamma_{1+s}(\pi_{XY})$  offers another representation of Wyner's common information  for this class of distributions. 
\section{R\'enyi Common Information}
In this section, we formally define and state some known results on the R\'enyi common information. The following definition,  which differs from an alternative one in \citet{graczyk2020gray}, mirrors that of the Wyner's common information from the perspective of the distributed simulation problem (Definition~\ref{def:wyner_CI}). In the following, we only consider $s \in (-1,1]\cup\{\infty\}$.

\begin{definition} \label{def:RCI}
The  {\em normalized R\'enyi common information\footnote{To be analogous to Definition~\ref{def:wyner_CI}, we should term $T_{1+s}(\pi_{XY})$ as the {\em minimal normalized R\'enyi distributed simulation rate of order $1+s$}. However, since  we have established that the mimimal distributed simulation rate is Wyner's common information  in Theorem~\ref{thm:wynerCI_sim}, henceforth, to avoid having too many different terminologies, we refer to such fundamental limits (operational definitions) as common information quantities. In other words, we define common information quantities  operationally.}  of order $1+s$}  between a pair of random variables $(X,Y)\sim \pi_{XY}$, denoted as  $T_{1+s}(\pi_{XY})$, is the infimum of all rates $R$ such that there exists a sequence of $(n,R)$-fixed-length  distributed source simulation  codes $\{ (P_{X^n | M_n},P_{Y^n | M_n})\}_{n\in\bbN}$ (Definition~\ref{def:wyner_code}) satisfying 
\begin{equation}
\lim_{n\to\infty} \frac{1}{n} D_{1+s}\left(P_{X^nY^n} \middle\| \pi_{XY}^n\right)=0. \label{eqn:norm_rci}
\end{equation}
Similarly, the {\em unnormalized R\'enyi common information of order $1+s$}, denoted as $\tilT_{1+s}(\pi_{XY})$, is analogous to the normalized version except that the criterion in \eqref{eqn:norm_rci} is replaced with the more stringent condition
\begin{equation}
\lim_{n\to\infty}  D_{1+s}\left(P_{X^nY^n} \middle\| \pi_{XY}^n\right)=0. \label{eqn:unnorm_rci}
\end{equation}
\end{definition}
A few remarks on Definition~\ref{def:RCI} are in order. First, note that $T_1(\pi_{XY})=T(\pi_{XY})$ is exactly Wyner's common information, as defined in Definition~\ref{def:wyner_CI}. Second, since the unnormalized criterion in \eqref{eqn:unnorm_rci} is more stringent than that of the normalized one in~\eqref{eqn:norm_rci}, we have 
\begin{equation}
T_{1+s}(\pi_{XY})\le\tilT_{1+s}(\pi_{XY}) \quad\mbox{for all}\;\, s\ge -1 .
\end{equation}
Furthermore, by the monotonically non-decreasing nature of the R\'enyi divergence in its parameter, we see that $T_{1+s}$ and $\tilT_{1+s}$ are also monotonically non-decreasing  in their parameter, i.e., 
\begin{equation}
T_{1+s}(\pi_{XY})\le T_{1+s'}(\pi_{XY})\quad\mbox{and}\quad \tilT_{1+s}(\pi_{XY})\le \tilT_{1+s'}(\pi_{XY}) \label{eqn:monoT}
\end{equation}
for all $-1\le s\le s' \le\infty$. Finally, for the special case in which $s=0$, we obtain 
\begin{equation}
T(\pi_{XY})=T_{1} (\pi_{XY})=\tilT_1 (\pi_{XY})=C_{\Wyner}(\pi_{XY}),
\end{equation}
where the statement for the unnormalized case (the final equality) comes from the one-shot soft covering lemma as discussed in Remark~\ref{rmk:unnorm}.

We are now ready to state the main result of this section; this result is due to the present authors \citep{YuTan2018,yu2020corrections}.
\begin{theorem}[Bounds on R\'enyi common information] \label{thm:RCI}  
The following hold:
\begin{enumerate}
\item[(a)] For $s=-1$, 
\begin{equation}
\tilT_0(\pi_{XY})= T_0(\pi_{XY})=0. \label{eqn:rci0}
\end{equation}
\item[(b)] For $s\in (-1,0]$, 
\begin{equation}
\tilT_{1+s}(\pi_{XY})=T_{1+s}(\pi_{XY})=C_{\Wyner}(\pi_{XY}). \label{eqn:rci_neg}
\end{equation}
\item[(c)] For $s\in (0,1] \cup\{\infty\}$, 
\begin{align}
\hspace{-.32in}\tilT_{1+s}(\pi_{XY}) \!\ge\! T_{1+s}(\pi_{XY})\!\ge\!\max\big\{ \uGamma_{1+s}(\pi_{XY}),C_{\Wyner}(\pi_{XY})\big\}, \label{eqn:rci_lower}
\end{align}
and 
\begin{align}
 T_{1+s}(\pi_{XY})\le \tilT_{1+s}(\pi_{XY})\le \oGamma_{1+s}(\pi_{XY}) .\label{eqn:rci_upper}
\end{align}
\end{enumerate}
\end{theorem}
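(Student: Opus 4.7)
The plan is to dispose of parts (a) and (b) quickly using monotonicity and a one-line construction, and then to devote dedicated arguments to the two directions of part~(c). For part~(a), since $D_0(P\|Q)=\log\frac{1}{Q(\supp P)}$, any code whose synthesized distribution has support containing $\supp(\pi_{XY}^n)$ achieves $D_0=0$; the trivial rate-$0$ scheme with $|\calM_n|=1$ and $P_{X^n|M_n}=\pi_X^n$, $P_{Y^n|M_n}=\pi_Y^n$ is such a choice, yielding $T_0=\tilT_0=0$. For part~(b), the achievability $\tilT_{1+s}\le C_{\Wyner}$ for $s\in(-1,0]$ follows by running any Wyner-achieving code---which by Remark~\ref{rmk:unnorm} drives the unnormalized KL divergence to zero---together with the monotonicity $D_{1+s}\le D_1$ for $s\le 0$. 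The matching converse $T_{1+s}\ge C_{\Wyner}$ requires more than monotonicity (which goes the wrong way here); one replays the single-letterization \eqref{eqn:single_letter0}--\eqref{eqn:single_letter2} behind Theorem~\ref{thm:wynerCI_sim}, invoking a Pinsker-type inequality valid for every positive R\'enyi order to argue that $\frac{1}{n}D_{1+s}\to 0$ forces the subtracted term in \eqref{eqn:single_letter} to vanish.

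For the upper bound $\tilT_{1+s}\le\oGamma_{1+s}(\pi_{XY})$ in part~(c), I would fix a triple $P_W P_{X|W} P_{Y|W}$ with $P_{XY}=\pi_{XY}$ attaining $\oGamma_{1+s}$, draw an i.i.d.\ codebook $\{W^n(m)\sim P_W^n\}$ of rate $R$, and synthesize $X^n,Y^n$ by passing $W^n(M_n)$ through independent test channels $P_{X|W}^n,P_{Y|W}^n$. A direct application of the one-shot soft-covering Lemma~\ref{lem:one-shot-sc} already drives the unnormalized $D_{1+s}$ to zero whenever $R>D_{1+s}(P_{X|W}P_{Y|W}\|\pi_{XY}|P_W)$; to improve this rate down to $\oGamma_{1+s}$ the analysis must be carried out type-by-type. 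Partitioning $\calW^n$ by type and, on each type class, controlling the contribution to the R\'enyi exponential using the observation (captured in \eqref{eqn:intuition_maximal} and Definition~\ref{def:max_cross_ent}) that the minimum of $\pi_{XY}^n(x^n,y^n)$ over sequences with prescribed marginal types on $\calX$ and $\calY$ has exponent $\rvH_\infty$---with soft analogue $\rvH_s$ for finite~$s$---and then summing across types via Sanov's theorem (Theorem~\ref{thm:sanov}) assembles the combination $-\frac{1+s}{s}H(XY|W)+\bbE_{P_W}[\rvH_s(P_{X|W},P_{Y|W}\|\pi_{XY})]$ defining $\oGamma_{1+s}$. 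The $s=\infty$ case then follows by continuity via Lemma~\ref{lem:prop_Gamma}(d).

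For the lower bound $T_{1+s}\ge\max\{\uGamma_{1+s},C_{\Wyner}\}$ in part~(c), the $C_{\Wyner}$ piece is immediate from the monotonicity \eqref{eqn:monoT} ($T_{1+s}\ge T_1=C_{\Wyner}$ for $s\ge 0$) and part~(b). For the $\uGamma_{1+s}$ piece, I would begin from the defining formula \eqref{eqn:renyi_div} of $D_{1+s}(P_{X^nY^n}\|\pi_{XY}^n)$, introduce two independent time indices $J,J'\sim\Unif[n]$ to decouple the $X$- and $Y$-coordinates of the single-letterization, and set $W:=(M_n,J)$, $W':=(M_n,J')$, $X:=X_J$, $Y:=Y_{J'}$. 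Sharing $M_n$ between $W$ and $W'$ while sampling $J,J'$ independently produces exactly a coupling $Q_{WW'}\in\calC(P_W,P_W)$ of the type required by \eqref{eqn:Gamma_down}, and the conditional independence $X^n-M_n-Y^n$ imposed by the code structure supplies the single-letter Markov chain $X-W-Y$. The max-coupling cross entropy $\rvH_s$ emerges after applying a variational representation of $D_{1+s}$ combined with the chain rule for coupling sets (Lemma~\ref{lem:coupling}).

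The hardest technical step in both directions is converting the R\'enyi exponential $\frac{1}{s}\log\sum P^{1+s}\pi^{-s}$ into the max-coupling cross entropy $\rvH_s$: standard joint-convexity and chain-rule manipulations yield expressions involving only product couplings $P_{X|W}\cdot P_{Y|W}$, not the maximum or minimum over general couplings that appear in $\oGamma_{1+s}$ and $\uGamma_{1+s}$ respectively. The method-of-types decomposition (in achievability) and the independent-time-index coupling construction (in converse) are precisely the tools designed to bridge this gap. Matching the two sides at the endpoint $s=\infty$ additionally demands that the $s\to\infty$ limit commute with the outer optimizations defining $\oGamma$ and $\uGamma$, which is the content of Lemma~\ref{lem:prop_Gamma}(d).
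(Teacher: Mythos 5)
Your treatment of part~(a) and the achievability direction of part~(b) are fine and match the paper's approach (trivial rate-$0$ code for $s=-1$; monotonicity of $D_{1+s}$ in its order together with Remark~\ref{rmk:unnorm} for $s\in(-1,0]$). The converse of part~(b), however, has a genuine gap. You propose to replay the KL single-letterization of Theorem~\ref{thm:wynerCI_sim} and to control the subtracted term $\frac{1}{n}D(P_{X^nY^n}\|\pi_{XY}^n)$ in \eqref{eqn:single_letter} from the hypothesis $\frac{1}{n}D_{1+s}(P_{X^nY^n}\|\pi_{XY}^n)\to 0$ via a ``Pinsker-type inequality.'' This cannot work: for $s\in(-1,0)$ the order-$(1+s)$ R\'enyi divergence is \emph{dominated} by the KL divergence (the R\'enyi divergence is nondecreasing in its order, with $D_1=D$), so smallness of $D_{1+s}$ gives no upper bound on $D$. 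Moreover, the Pinsker-type inequalities for R\'enyi divergences (Lemma~\ref{lem:Renyiinequality}, Remark~\ref{rmk:gila}) \emph{lower}-bound $D_{1+s}$ by a quadratic function of the TV distance, so from the \emph{normalized} hypothesis $\frac{1}{n}D_{1+s}\to 0$ one only obtains $|P_{X^nY^n}-\pi_{XY}^n|\le\sqrt{c\,n\cdot o(1)}$, which is vacuous since TV is already at most $1$. The paper argues by contraposition: if $R<C_{\Wyner}(\pi_{XY})$, the exponential strong converse for TV (Theorem~\ref{thm:tv_ci}(c), established via Oohama's information-spectrum method) gives $|P_{X^nY^n}-\pi_{XY}^n|\ge 1-2^{-n\delta_n}$ with $\liminf_n\delta_n>0$; plugging $\eps=1-2^{-n\delta_n}$ into Lemma~\ref{lem:Renyiinequality} then bounds $\frac{1}{n}D_{1+s}$ away from zero. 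That detour through the already-proved TV strong converse is essential.

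The converse of part~(c) as you sketch it also relies on a construction that does not match the structure of the multi-letter objective. Introducing two \emph{independent} time indices $J,J'$ and setting $X:=X_J$, $Y:=Y_{J'}$, $W:=(M_n,J)$, $W':=(M_n,J')$ cannot arise from the quantity being single-letterized, because $\pi_{XY}^n(x^n,y^n)=\prod_{i}\pi_{XY}(x_i,y_i)$ only ever evaluates $\pi_{XY}$ at same-index pairs. After substituting a test distribution $\tilQ=P_{M_n}Q_{X^nY^n|M_n}$ into the variational formula \eqref{eqn:var_Renyi}, the cross-entropy piece single-letterizes to $-ns\,\bbE[\log\pi_{XY}(X_J,Y_J)]$ with a \emph{shared} index, not $\pi_{XY}(X_J,Y_{J'})$. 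The paper therefore keeps the same $J$ for both coordinates and draws the extra flexibility encoded by $Q_{WW'}\in\calC(P_W,P_W)$ in \eqref{eqn:Gamma_down} from the minimization over couplings of the \emph{pasts}, $\tilQ_{X^{i-1}Y^{i-1}|M_n}\in\calC(P_{X^{i-1}|M_n},P_{Y^{i-1}|M_n})$, made available by the chain rule for coupling sets (Lemma~\ref{lem:coupling}). Finally, your achievability sketch for part~(c) aims in the right direction (type decomposition plus Sanov) but omits the truncated product distribution \eqref{eqn:trunc1}--\eqref{eqn:trunc3}, whose role is to enforce the pointwise bound $P_{X^nY^n}\le(1-\gamma_n)^{-1}\pi_{XY}^n$; without that truncation, rare atypical sequences where $P_{X^nY^n}/\pi_{XY}^n$ is unbounded can dominate $\sum P^{1+s}\pi^{-s}$ for R\'enyi orders $1+s>1$, and the claimed single-letterization to $\oGamma_{1+s}$ does not go through.
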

For $s \in [-1, 0]$, we have tight characterizations of the normalized and unnormalized R\'enyi common information. For $s\in (0,1]\cup\{\infty\}$, we only have bounds in general. 
Despite only having bounds for this case, combining~\eqref{eqn:rci_lower} and Lemma~\ref{lem:prop_Gamma}(e) yields  the following corollary.
\begin{corollary}[Sufficient condition for equality of R\'enyi and Wyner's common information] \label{cor:wyner_pdt}
Let $s\in (-1,1]\cup\{\infty\}$.  For any Wyner-product distribution $\pi_{XY}$,  
\begin{equation}
T_{1+s}(\pi_{XY}) = \tilT_{1+s}(\pi_{XY})=C_{\Wyner}(\pi_{XY}) . \label{eqn:TeqCWyner}
\end{equation}
\end{corollary}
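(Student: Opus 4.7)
The proof is a direct sandwich argument, so I would split into the two cases where Theorem~\ref{thm:RCI} gives qualitatively different information. First, for $s\in(-1,0]$, the conclusion~\eqref{eqn:TeqCWyner} is immediate from Theorem~\ref{thm:RCI}(b), which asserts $\tilT_{1+s}(\pi_{XY})=T_{1+s}(\pi_{XY})=C_{\Wyner}(\pi_{XY})$ without any structural assumption on $\pi_{XY}$; here the Wyner-product hypothesis is not invoked at all.

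For the remaining range $s\in(0,1]\cup\{\infty\}$, I would chain together the two bounds supplied by Theorem~\ref{thm:RCI}(c): the lower bound~\eqref{eqn:rci_lower} gives $T_{1+s}(\pi_{XY})\ge C_{\Wyner}(\pi_{XY})$, while the upper bound~\eqref{eqn:rci_upper} gives $\tilT_{1+s}(\pi_{XY})\le \oGamma_{1+s}(\pi_{XY})$. Since $T_{1+s}\le \tilT_{1+s}$ holds in general, these inequalities combine into the sandwich
\begin{equation*}
C_{\Wyner}(\pi_{XY})\;\le\;T_{1+s}(\pi_{XY})\;\le\;\tilT_{1+s}(\pi_{XY})\;\le\;\oGamma_{1+s}(\pi_{XY}).
\end{equation*}
At this point I would invoke Lemma~\ref{lem:prop_Gamma}(e), whose ``if'' direction asserts that whenever $\pi_{XY}$ is a Wyner-product distribution, the upper pseudo-common information collapses to Wyner's common information, i.e.\ $\oGamma_{1+s}(\pi_{XY})=C_{\Wyner}(\pi_{XY})$ for every $s\in(0,\infty]$. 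Plugging this equality into the right end of the sandwich forces the two outer quantities to agree, and hence all four quantities in the chain coincide, giving~\eqref{eqn:TeqCWyner}.

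There is essentially no obstacle at the level of the corollary itself: all of the heavy lifting has been front-loaded into Theorem~\ref{thm:RCI} (the achievability and converse for the R\'enyi common information, based on the one-shot soft-covering lemma of Lemma~\ref{lem:one-shot-sc}) and into Lemma~\ref{lem:prop_Gamma}(e) (the structural identification of when $\oGamma_{1+s}$ reduces to $C_{\Wyner}$). The only mild subtlety worth flagging is the case $s=\infty$, where one should verify that the ``$s\to\infty$'' limits in~\eqref{eqn:oGamma_infty} are compatible with the upper-bound inequality $\tilT_{1+s}\le\oGamma_{1+s}$; but this is handled by Lemma~\ref{lem:prop_Gamma}(d), which guarantees that the limit commutes with the minimization defining $\oGamma_{\infty}$, so no separate argument is needed.
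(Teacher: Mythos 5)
Your proof is correct and follows essentially the same route as the paper: the paper explicitly says the corollary follows from combining the bounds in Theorem~\ref{thm:RCI}(c) (together with the monotonicity inequality $T_{1+s}\le \tilT_{1+s}$) with the ``if'' direction of Lemma~\ref{lem:prop_Gamma}(e), and handles $s\in(-1,0]$ via Theorem~\ref{thm:RCI}(b). Your aside about $s=\infty$ is harmless but unnecessary, since Theorem~\ref{thm:RCI}(c) already states the bounds directly for $s\in(0,1]\cup\{\infty\}$, so no separate limit-commutation check is needed.
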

By the inclusions shown in Fig.~\ref{fig:products}, the equality in \eqref{eqn:TeqCWyner} also holds for pseudo-product distributions. 

\section{TV Common Information and Its Strong Converse} \label{sec:tvci}

Interestingly, the converse part of the proof of Part (b) of Theorem~\ref{thm:RCI} requires an auxiliary result concerning the so-called $\eps$-TV common information. We formally define this quantity in the following. 
\begin{definition} \label{def:TVCI}
For $0\le\eps<1$, the {\em $\eps$-TV common information} $T_{\eps}^{\mathrm{TV}}(\pi_{XY})$ between a pair random variables $(X,Y)\sim \pi_{XY}$ is the infimum of all rates~$R$ such that there exists a sequence of $(n,R)$-fixed-length  distributed source simulation  codes $\{ (P_{X^n | M_n},P_{Y^n | M_n})\}_{n\in\bbN}$ (Definition~\ref{def:wyner_code}) satisfying 
\begin{equation}
\limsup_{n\to\infty} \left| P_{X^nY^n}-\pi_{XY}^n \right|\le\eps. \label{eqn:tv_ci}
\end{equation}
We abbreviate the $0$-TV common information as the {\em TV common information}.
If $T_{\eps}^{\mathrm{TV}}(\pi_{XY})$ does not depend on $0\le\eps<1$, we say that {\em the strong converse property} holds.
\end{definition}

If the strong converse property~\cite{Han10,Wolfowitz57}  holds, there is a sharp phase transition in rates such that the TV distance between the synthesized and target distributions can be made arbitrarily small and those rates such that the TV distance necessarily tends to one as the blocklength grows. This is usually a very pleasing phenomenon in Shannon theory because in this case, there is no tradeoff between a permissible error and the rate, at least in the first-order sense; see Fig.~\ref{fig:tv_ci}. The tradeoff between the error probability and rate can be seen in the {\em second-order coding rate}~\cite{PPV10,Hayashi08, Hayashi09,TanBook}.

\begin{figure}[t]
\centering
\begin{picture}(115, 113)
\setlength{\unitlength}{.47mm}
\put(0, 10){\vector(1, 0){110}}
\put(10, 0){\vector(0,1){90}}
\put(110,8){ $R$}
\put(-25, 84){TV Dist.} 
\put(60, 8){\line(0, 1){4}}
\linethickness{0.4mm}
\put(60, 75){\line(-1, 0){50}}
\put(60, 75){\line(0, -1){65}}
\put(60, 10){\line(1, 0){45}}
\put(2, 73){$1$}
\put(45, 0){$C_{\Wyner}(\pi_{XY})$}
\put(2, 1){$0$}
\end{picture}
\caption{Plot of the asymptotic TV distance $\lim_{n\to\infty}\left| P_{X^nY^n}-\pi_{XY}^n \right|$ against the rate $R$. Observe the sharp phase transition at $C_{\Wyner}(\pi_{XY})$.}
\label{fig:tv_ci}
\end{figure}
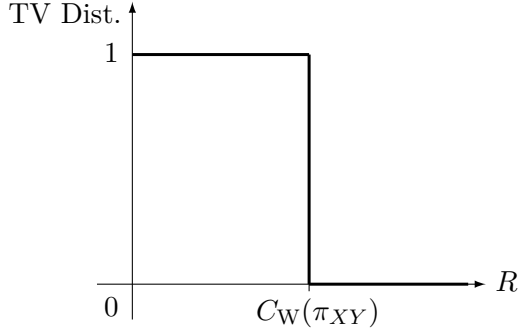

\enlargethispage{-\baselineskip}
Unlike the R\'enyi common information with orders $(1,2]\cup\{\infty\}$, a full characterization of $T_{\eps}^{\mathrm{TV}}(\pi_{XY})$ is available. The strong converse part is due to the present authors~\citep{YuTan2018} leveraging an ingenious information spectrum-based~\cite{VH94, Han10}, single-letterization technique by \citet{oohama18}, while the achievability part  can be obtained  using  arguments in \citet{Hayashi06} or \citet[Lemma~IV.1]{cuff13}. 

\begin{theorem}[$\eps$-TV Common Information] \label{thm:tv_ci}
The following hold:
\begin{enumerate}
\item[(a)] For any $\eps\in [0,1)$, 
\begin{equation}
T_{\eps}^{\mathrm{TV}}(\pi_{XY})=C_{\Wyner}(\pi_{XY }).
\end{equation}
\item[(b)] Let $R>C_{\Wyner}(\pi_{XY })$. Then, there exists a sequence of rate-$R$  codes such that $\left| P_{X^nY^n}-\pi_{XY}^n \right|$ converges to $0$ exponentially fast (i.e., $\limsup_{n\to\infty} \frac{1}{n}\log\left| P_{X^nY^n}-\pi_{XY}^n \right|<0$). 
\item[(c)] Let  $R<C_{\Wyner}(\pi_{XY })$. Then,   all sequences of rate-$R$ codes result in   $\left| P_{X^nY^n}-\pi_{XY}^n \right|$ converging to $1$ exponentially fast  (i.e., $\limsup_{n\to\infty} \frac{1}{n}\log (1-\left| P_{X^nY^n}-\pi_{XY}^n \right|)<0$).  
\end{enumerate}
\end{theorem}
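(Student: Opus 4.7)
The plan is to prove the two exponential statements (b) and (c), from which part (a) follows immediately: (b) shows any rate above $C_{\Wyner}(\pi_{XY})$ drives the TV distance to $0$, and (c) shows any rate below forces the TV distance to~$1$, so $T_\eps^{\mathrm{TV}}(\pi_{XY})=C_{\Wyner}(\pi_{XY})$ for every $\eps\in[0,1)$.

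For part~(b), my strategy is to apply the one-shot soft-covering lemma (Lemma~\ref{lem:one-shot-sc}) to a single-letter test channel $P_W P_{X|W} P_{Y|W}$ achieving Wyner's common information, i.e., satisfying $P_{XY}=\pi_{XY}$ and $I(XY;W)=C_{\Wyner}(\pi_{XY})$. Given $R>C_{\Wyner}(\pi_{XY})$, the continuity of $s\mapsto D_{1+s}(P_{X|W}P_{Y|W}\|\pi_{XY}|P_W)$ at $s=0$ (where it equals $I(XY;W)$) lets me pick $s\in(0,1]$ small enough that $D_{1+s}(P_{X|W}P_{Y|W}\|\pi_{XY}|P_W)<R$. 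Substituting $P_{U|W}\leftarrow P_{X|W}P_{Y|W}$, $\pi_U\leftarrow \pi_{XY}^n$ and $R\leftarrow nR$ into~\eqref{eqn:one-shot-sc} shows that $D_{1+s}(P_{X^nY^n|\scC}\|\pi_{XY}^n|P_{\scC})$ decays exponentially in $n$, so at least one deterministic codebook realizes the same exponential decay. Pinsker's inequality~\eqref{eqn:pinsker}, which also bounds TV distance by the square root of $D_{1+s}$ for $s\in(0,1]$ via the monotonicity of the R\'enyi divergence, then yields exponential decay of $|P_{X^nY^n}-\pi_{XY}^n|$.

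For part~(c), I would follow the information-spectrum, single-letterization route attributed to \citet{oohama18}. Fix any sequence of rate-$R$ codes with $R<C_{\Wyner}(\pi_{XY})$ and introduce the time-sharing variable $J\sim\Unif[n]$ together with the auxiliary $W=(M_n,X^{J-1},Y^{J-1},J)$, $X=X_J$, $Y=Y_J$, so that $X-W-Y$ under the synthesized distribution. The idea is to define a ``spectral'' set
\begin{equation}
\calS_n:=\bigg\{(x^n,y^n):\frac{1}{n}\log\frac{1}{\pi_{XY}^n(x^n,y^n)}\in \mathcal{I}\bigg\},
\end{equation}
where $\mathcal{I}$ is a small window around $H(\pi_{XY})$ intersected with a constraint that separates the synthesized support from typical behaviour. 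Since $\pi_{XY}^n(\calS_n)\to 1$ by the law of large numbers, I plan to show that $P_{X^nY^n}(\calS_n)$ is exponentially bounded away from $1$: using the rate bound $|\calM_n|\le 2^{nR}$ and convexity, one can single-letterize $\frac{1}{n}I(X^nY^n;M_n)$ through $W$ to recover $I(XY;W)$ under a conditional distribution that matches $\pi_{XY}$ to within $o(1)$, and the gap to $C_{\Wyner}(\pi_{XY})$ is what enforces the exponential deficit. Combining with $|P_{X^nY^n}-\pi_{XY}^n|\ge \pi_{XY}^n(\calS_n)-P_{X^nY^n}(\calS_n)$ produces exponential convergence of the TV distance to~$1$.

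The main obstacle is the strong-converse step in part~(c). Standard Fano-based converses yield only a weak converse, and a naive single-letterization of $\frac{1}{n}D(P_{X^nY^n}\|\pi_{XY}^n)$ (as used after \eqref{eqn:single_letter0}) allows an unbounded gap as the TV distance approaches~$1$. The delicate point is choosing the spectral threshold and the change-of-measure argument so that the coupling between the synthesized distribution and any $P_W P_{X|W}P_{Y|W}$ with $P_{XY}\approx \pi_{XY}$ yields a genuine lower bound approaching $C_{\Wyner}(\pi_{XY})$, not merely $I_\pi(X;Y)$; this is precisely where Oohama's recursive inequality/shell construction is needed, and transporting it to the Wyner setting is the technical crux of the argument.
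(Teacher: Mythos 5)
Your plan for part~(b) is sound and mirrors the paper's intent: pick a single-letter optimizer $P_W P_{X|W}P_{Y|W}$, use continuity of $s\mapsto D_{1+s}(P_{X|W}P_{Y|W}\|\pi_{XY}|P_W)$ at $s=0$ to choose $s>0$ with $D_{1+s}<R$, feed the $n$-fold product into the one-shot soft-covering lemma to get exponential decay of the conditional R\'enyi divergence (the second term in~\eqref{eqn:one-shot-sc} equals~$1$ when $P_{XY}=\pi_{XY}$), pass to a deterministic codebook, and close via Pinsker together with monotonicity of $D_{1+s}$ in~$s$. Part~(a) deduced from (b) and (c) is also fine.

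Part~(c), however, has a genuine gap, and you have essentially flagged it yourself. Your single spectral set $\calS_n$, defined only by a window for $\tfrac{1}{n}\log\bigl(1/\pi_{XY}^n(x^n,y^n)\bigr)$ plus an unspecified ``constraint that separates the synthesized support from typical behaviour,'' cannot by itself yield a Chernoff-type exponent, because it contains no information about either the decomposable structure $P_{X^n|M_n}P_{Y^n|M_n}$ of the synthesized distribution or the rate $R$. The argument the paper uses (following Oohama) lower bounds $|P_{X^nY^n}-\pi_{XY}^n|$ by $1-\Pr(\calA_1\cap\calA_2\cap\calA_3\mid\calS)-3\cdot 2^{-n\eta}$ where, crucially, $\calA_2$ and $\calA_3$ compare the code-induced distribution and a freely chosen reference $Q_{X^nY^n|M_n}$ on the enlarged space $\calX^n\times\calY^n\times\calM_n$ (not merely on $\calX^n\times\calY^n$), and $\calA_3$ has the rate $R$ built into the threshold. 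This freedom to pick the reference distributions $Q_{X^nY^n}$ and $Q_{X^nY^n|M_n}$ (the Hayashi-style change of measure) is what lets the subsequent Chernoff bound and single-letterization recover the Wyner constraint $X-W-Y$ rather than collapsing to a mutual-information bound. Since you explicitly describe this transport ``to the Wyner setting'' as an open ``technical crux'' rather than carrying it out, the strong-converse step is stated as a plan but not actually proved; to complete it you would need the full three-set, $m$-indexed construction and the associated exponent optimization, not a single reference-free set in $\calX^n\times\calY^n$.
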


This theorem is illustrated in an alternative way in Fig.~\ref{fig:tv_ci}. In fact, Parts~(b) and~(c) say that not only do we have matching achievability and strong converse results, these results are also {\em exponentially} strong in the sense that the TV distance converges to $0$ and $1$ exponentially fast. This rate of convergence, for the exponentially strong converse part, has implications for the converse proof of Part~(b) of Theorem~\ref{thm:RCI}. We discuss this in its proof sketch in Section~\ref{sec:sneg}.

\section{Doubly Symmetric Binary Sources} \label{sec:renyi_dsbs}
 
We now consider the DSBS with crossover probability $p$ as depicted in Fig.~\ref{fig:dsbs}. Since the R\'enyi common information for the case $s\in (-1,0]$ is exactly Wyner's common information, we can see how it depends on $p$ from  Fig.~\ref{fig:wyner_dsbs}. Thus, we will only be concerned with the case $s\in(0,1]\cup\{\infty\}$. Here we show  that for the DSBS, we have strong numerical evidence that  the upper bound on the R\'enyi common information coincides with the lower bound. Recall the definitions of $a$, $\alpha$ and $\beta$ from Section~\ref{sec:dsbs}.
\begin{proposition}\label{prop:RCI_dsbs}
If $\pi_{XY}$ is  a DSBS with crossover probability $p$  and $s\in (0,1]$, the R\'enyi common information can be upper bounded as 
\begin{align}
T_{1+s}(\pi_{XY})&\le \tilT_{1+s}(\pi_{XY}) \nn\\
&\le-\frac{1+s}{s}\cdot 2 h (a)+\frac{1}{s}\Big[ h_4(q^*, a-q^*, a-q^*, 1+q^*-2a)  \nn\\*
&\qquad\qquad\qquad\qquad\qquad\qquad-2s(a-q^*)\log\beta\Big], \label{eqn:RCI_dsbs_s}
\end{align}
where  $h_4(a_1,a_2,a_3,a_4)=-\sum_{i=1}^4 a_i\log a_i$ is the {\em quaternary entropy} and 
\begin{align}
q^* &:=\frac{\sqrt{\kappa(\bara-a)^2+4\kappa a\bara} -(\kappa(\bara-a)+2a) }{2(\kappa-1) } \label{eqn:q_star}
\end{align}
where $\kappa :=\big( {\alpha}/{\beta}\big)^{2s}$.
For $s=\infty$, 
\begin{align}
\hspace{-.2in}T_{\infty}(\pi_{XY})&= \tilT_{\infty}(\pi_{XY})\label{eqn:RCI_infty0}\\
\hspace{-.2in} & =-2h (a)-(1 -  2a)\log\bigg( \frac{a^2+\bara^2}{2} \bigg)- 2a\log(a\bara). \label{eqn:RCI_infty}
\end{align}
\end{proposition}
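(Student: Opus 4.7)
The plan is to use Theorem~\ref{thm:RCI}(c), which gives $\tilT_{1+s}(\pi_{XY}) \le \oGamma_{1+s}(\pi_{XY})$, and evaluate the right-hand side on a judiciously chosen single-letter distribution $P_W P_{X|W} P_{Y|W}$. For the DSBS, the natural candidate is the Wyner-optimal choice: $W \sim \mathrm{Bern}(1/2)$, $X = W \oplus A$, $Y = W \oplus B$, with $A, B \sim \mathrm{Bern}(a)$ mutually independent of $W$, where $a = (1-\sqrt{1-2p})/2$ so that $a \ast a = p$ and $P_{XY} = \pi_{XY}$. Under this choice, $X$ and $Y$ are conditionally independent shifts of $W$, so $H(XY|W) = H(A) + H(B) = 2h(a)$, which accounts for the first term in \eqref{eqn:RCI_dsbs_s}. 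Any choice of $P_W P_{X|W}P_{Y|W}$ gives an upper bound on $\oGamma_{1+s}$ and thus on $\tilT_{1+s}$, so we do not need to verify optimality of this particular choice.

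Next I would compute $\bbE_{P_W}[\rvH_s(P_{X|W}, P_{Y|W}\|\pi_{XY})]$. By the $0 \leftrightarrow 1$ symmetry of the DSBS, the two conditionals contribute equally, so it suffices to analyze $P_{X|W=0} = P_{Y|W=0} = \mathrm{Bern}(a)$. Any coupling $Q_{XY} \in \calC(\mathrm{Bern}(a), \mathrm{Bern}(a))$ is determined by a single scalar $q := Q_{XY}(1,1) \in [\max\{0, 2a-1\}, a]$, with $Q_{XY}(0,0) = 1-2a+q$ and $Q_{XY}(1,0) = Q_{XY}(0,1) = a-q$. Plugging this into the definition of $\rvH_s$ in \eqref{eqn:maximal_mixed}, the optimization reduces to maximizing
\begin{equation}
f(q) = (1-2a+2q)\log\tfrac{1}{\alpha} + 2(a-q)\log\tfrac{1}{\beta} + \tfrac{1}{s}h_4(q, a-q, a-q, 1-2a+q).
\end{equation}
Setting $f'(q) = 0$ gives the stationarity equation $(a-q)^2 / [q(1-2a+q)] = \kappa$ with $\kappa = (\alpha/\beta)^{2s}$, a quadratic in $q$. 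The root lying in the feasible interval is the $q^*$ displayed in \eqref{eqn:q_star}. Substituting $q^*$ back into $f$ and combining with the $-\frac{1+s}{s}\cdot 2h(a)$ term then yields the upper bound \eqref{eqn:RCI_dsbs_s}.

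For the $s = \infty$ case, I would take $s \to \infty$ in the preceding derivation. Since $\kappa \to \infty$, the stationarity equation forces $q^* \to 0$, the $\frac{1}{s}$-weighted quaternary entropy vanishes, and $\frac{1+s}{s} \to 1$. Using $\alpha = (a^2+\bara^2)/2$ and $\beta = a\bara$ (which follow from $p = 2a\bara$), the surviving terms yield exactly the right-hand side of \eqref{eqn:RCI_infty}, giving $\tilT_\infty(\pi_{XY}) \le \oGamma_\infty(\pi_{XY}) \le -2h(a) - (1-2a)\log\frac{a^2+\bara^2}{2} - 2a\log(a\bara)$.

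The main obstacle is the matching lower bound in \eqref{eqn:RCI_infty0}--\eqref{eqn:RCI_infty}. By Theorem~\ref{thm:RCI}(c) we have $T_\infty(\pi_{XY}) \ge \uGamma_\infty(\pi_{XY})$, but $\uGamma_\infty$ involves the extra inner infimum over couplings $Q_{WW'} \in \calC(P_W, P_W)$, which in principle can be strictly smaller than the equality coupling. Thus the hard part is ruling out that a non-trivial $Q_{WW'}$ paired with some $P_W P_{X|W} P_{Y|W}$ beats the Wyner-optimal configuration. The strategy I would pursue is to exploit the binary symmetry of the DSBS (e.g., invariance under $(X,Y) \mapsto (X\oplus 1, Y\oplus 1)$) to argue that one may restrict to symmetric $P_W$ and symmetric couplings, reducing the search to a small-dimensional problem that can be matched against the upper bound. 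Alternatively, one can appeal to the equivalence between the unnormalized R\'enyi common information of order $\infty$ and the exact common information that will be established in Section~\ref{ch:exact}, since the exact common information of the DSBS admits a known matching lower bound (via a combinatorial/nonnegative-rank argument on the target matrix), which would close the gap.
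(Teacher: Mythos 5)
Your upper-bound derivations match the paper's approach. For $s \in (0,1]$ you correctly invoke $\tilT_{1+s} \le \oGamma_{1+s}$ (Theorem~\ref{thm:RCI}(c)), choose the Wyner-optimal $(W,X,Y)$ so that $H(XY|W) = 2h(a)$, parametrize the coupling $Q_{XY}\in\calC(\Bern(a),\Bern(a))$ by $q = Q_{XY}(1,1) \in [0,a]$, and solve the stationarity equation $(a-q)^2/[q(1-2a+q)] = \kappa$. Letting $s \to \infty$ in the same development (so $\kappa\to\infty$, $q^*\to 0$, the $\tfrac{1}{s}$-weighted quaternary entropy vanishes) yields the $\oGamma_\infty$ upper bound for~\eqref{eqn:RCI_infty}. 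This is essentially how the paper does it.

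The gap is in the matching lower bound for $s=\infty$, which you correctly identify as the hard part, but neither of your two suggested remedies works. Appealing to the exact common information is circular: Theorem~\ref{thm:equivalence} shows $T_{\Ex}(\pi_{XY}) = \tilT_\infty(\pi_{XY})$, so it merely relabels the quantity and supplies no independent lower bound. Moreover, there is no nonnegative-rank-based lower bound for the exact CI: by Corollary~\ref{cor:ECI_nnr}, the normalized log-nonnegative-rank exponent equals the exact R\'enyi CI of order~$0$, and since $\alpha \mapsto T_{\Ex}^{(\alpha)}$ is nonincreasing this is an \emph{upper} bound on $T_{\Ex} = T_{\Ex}^{(1)}$, not a lower bound. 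The symmetry reduction you sketch is also too vague to close the argument. The paper's actual route (Proof Sketch of Proposition~\ref{prop:dsbs_exact}) uses $T_\infty \ge \uGamma_\infty$ from Theorem~\ref{thm:RCI}(c): parametrize a general $(P_W, P_{X|W}, P_{Y|W})$ via $\alpha_w := \Pr(X=0 \mid W=w)$ and $\beta_{w'} := \Pr(Y=0 \mid W=w')$ (so $P_{XY}=\pi_{XY}$ forces $\bbE[\alpha_W]=\bbE[\beta_W]=1/2$ and $\bbE[\alpha_W\beta_W]=\alpha$), then lower-bound
\begin{equation}
\rvH_\infty(P_{X|W=w}, P_{Y|W=w'}\|\pi_{XY}) \ge \log\frac{1}{\alpha} + \Big(\min\{\alpha_w,\overline{\alpha_w}\} + \min\{\beta_{w'},\overline{\beta_{w'}}\}\Big)\log\frac{\alpha}{\beta},
\end{equation}
which separates the $w$ and $w'$ dependence and thus neutralizes the inner infimum over couplings $Q_{WW'}\in\calC(P_W,P_W)$. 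One then finishes by Wyner's concavity arguments for the DSBS (concavity of $h$ and $\sqrt{\cdot}$). This is the piece your proposal is missing.
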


The idea of the proof of~\eqref{eqn:RCI_dsbs_s} in Proposition~\ref{prop:RCI_dsbs} is straightforward but tedious. It involves considering the Markov chain as shown on the right plot of Fig.~\ref{fig:dsbs} and noticing for the random variables $(X,W,Y)$ (such that $X-W-Y$ forms a Markov chain), the coupling set is
\begin{equation}
\calC(P_{X|W=w}, P_{Y|W=w})=\left\{\begin{bmatrix}
a & a-q \\ a-q & 1+q -2a
\end{bmatrix}: 0\le q\le a  \right\}.
\end{equation}
By noticing this, we can then evaluate  the maximal $s$-mixed cross entropy $\rvH_{s}(P_{X|W=w}, P_{Y|W=w}\|\pi_{XY})$ by optimizing over the   scalar parameter $0\le q\le a$ to yield $q^*$ in~\eqref{eqn:q_star}. This shows~\eqref{eqn:RCI_dsbs_s}. We defer the discussion and justification of the  R\'enyi common information  of order~$\infty$ in \eqref{eqn:RCI_infty0}--\eqref{eqn:RCI_infty} to the next section. 

\begin{figure}[!ht]
\centering
\includegraphics[width = .8\columnwidth]{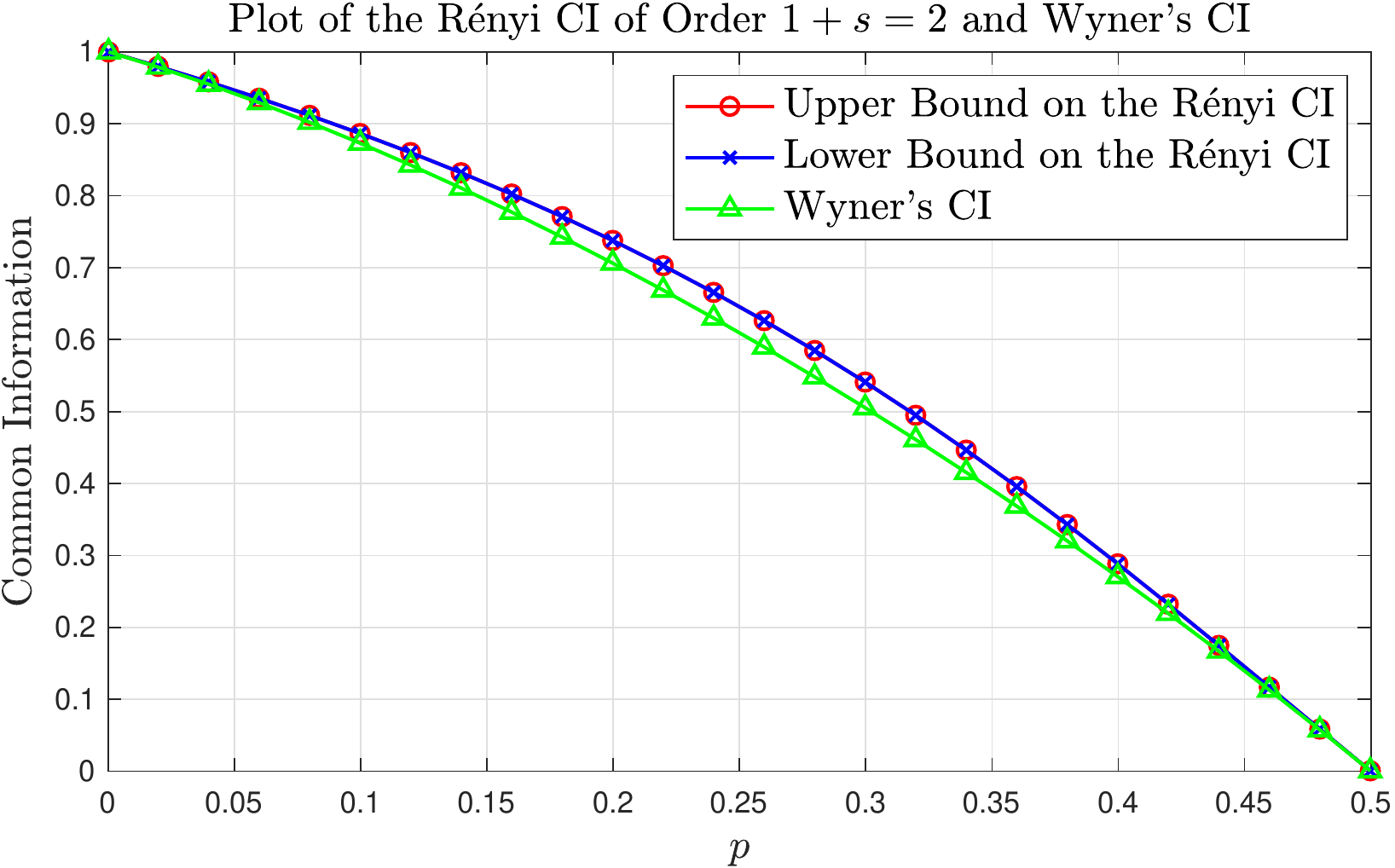}
\caption{Plots of the upper bound in \eqref{eqn:RCI_dsbs_s} and lower bound in \eqref{eqn:rci_lower}
of the R\'enyi common information with order $1+s=2$, and Wyner's common information}
\label{fig:renyi_dsbs}
\end{figure}

Upper and lower bounds for the R\'enyi common
information of order $1+s=2$, as well as Wyner's common information for the
DSBS are illustrated in Figs.~\ref{fig:renyi_dsbs} and~\ref{fig:renyi_dsbs_s}. Unlike the upper bound, we  do not have a closed-form expression for the lower bound so we resort to numerical optimization to evaluate \eqref{eqn:rci_lower}. To do so, we   gradually increase the alphabet size of $W$ from $2 $ to $10$ and we notice for the DSBS that  this does not change the resulting curve and in fact it appears to coincide with the upper bound. Hence it is
natural to conjecture the upper bound in~\eqref{eqn:RCI_dsbs_s} in Proposition~\ref{prop:RCI_dsbs} for the
DSBS is tight. Finally, we note that  the  R\'enyi common information   of orders larger than $1$ for the DSBS are strictly larger than Wyner's common information. 

\begin{figure}[!ht]
\centering
\includegraphics[width = .8\columnwidth]{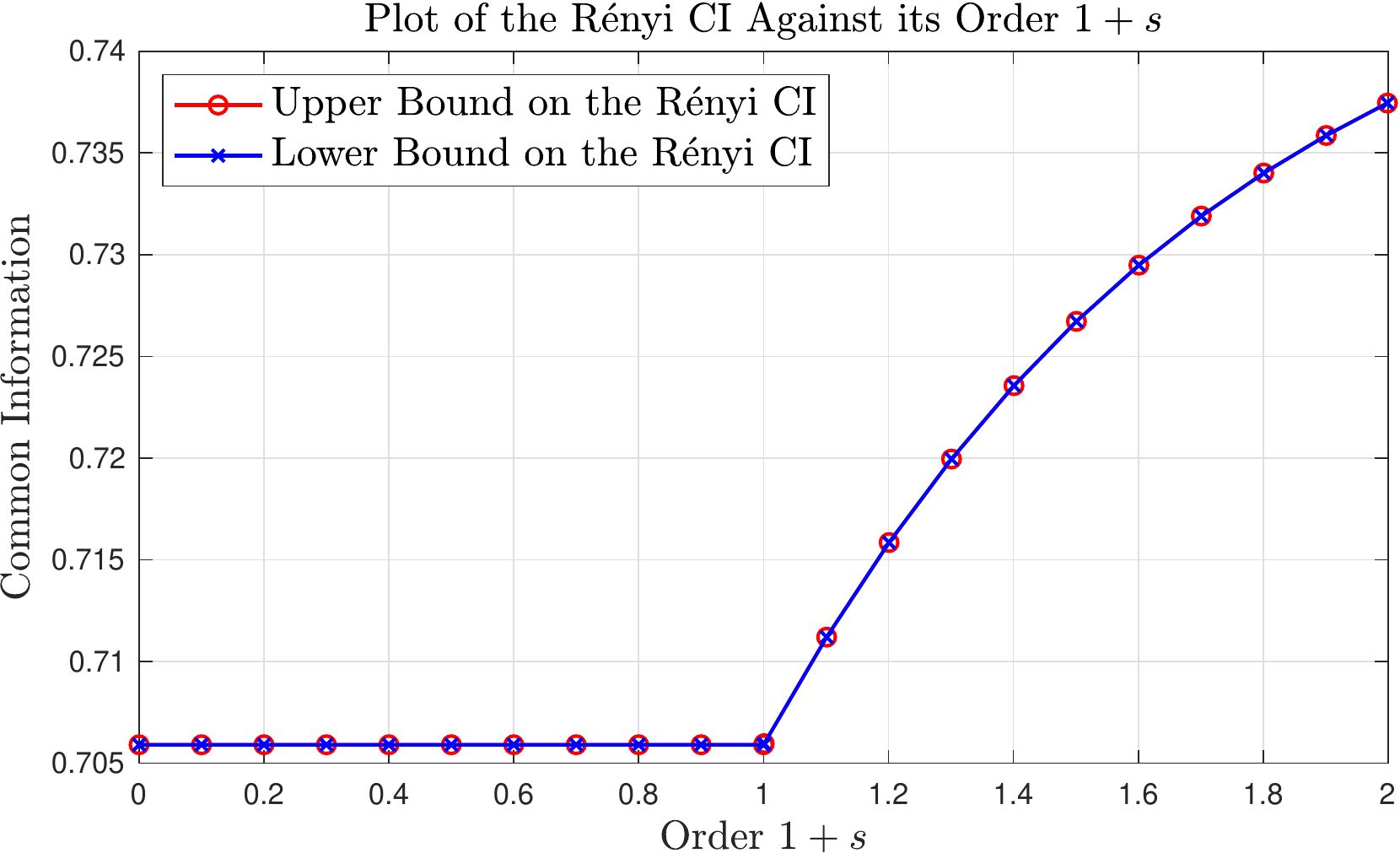}
\caption{Illustrations of the upper bound in \eqref{eqn:RCI_dsbs_s} and lower bound in \eqref{eqn:rci_lower}
of the R\'enyi common information as functions of $s\in (-1,1]$ (or order $1+s\in (0,2]$) for the DSBS with crossover probability $p=0.2$. Notice that for $s\in (-1,0]$, the R\'enyi common information $T_{1+s}(\pi_{XY})$ is Wyner's common information $T(\pi_{XY})=C_{\Wyner}(\pi_{XY})$  so the curve is constant in this range. }
\label{fig:renyi_dsbs_s}
\end{figure}

\section{Proof Sketches} \label{sec:prf_rci}
In this section, which may be skipped at a first reading, we provide  proof   sketches  of Theorems~\ref{thm:RCI} and~\ref{thm:tv_ci}. See \cite{YuTan2018} and \cite{yu2020corrections}  for details.  High-level ideas and interconnections among the  proofs  are presented in Table~\ref{tab:prf_rci}.

\begin{table}[t]
\caption{Summary of proof ideas for the various cases of Theorems~\ref{thm:RCI} and~\ref{thm:tv_ci}. RCI, WCI, norm.\ and unnorm.\ respectively stand for R\'enyi and Wyner's common information, normalized and unnormalized.}
\label{tab:prf_rci}
\resizebox{\textwidth}{!}{
\centering{\small
\begin{tabular}{|c|c|c|}
\hline
                                                                                & Achievability                                                                                                               & Converse                                                                                                     \\ \hline\hline
TV CI                                                                           & \begin{tabular}[c]{@{}c@{}}Soft-covering~\cite{Hayashi06,cuff13} or \\ implied by unnorm.\ WCI\end{tabular} & \begin{tabular}[c]{@{}c@{}}Information spectrum \\ method~\cite{oohama18}\end{tabular}   \\ \hline
\begin{tabular}[c]{@{}c@{}}Unnorm.\ RCI \\ $s\in (-1,0]$\end{tabular}            & Implied by unnorm.\  WCI & \begin{tabular}[c]{@{}c@{}}Implied by norm. RCI \\ $s\in (-1,0]$\end{tabular}                                \\ \hline
\begin{tabular}[c]{@{}c@{}}Norm.\ RCI \\ $s\in (-1,0]$\end{tabular}              & \begin{tabular}[c]{@{}c@{}}Implied by unnorm.\ RCI \\ $s\in (-1,0]$\end{tabular}                                               & \begin{tabular}[c]{@{}c@{}}TV exp.\ strong converse \\ \& Pinsker's inequality~\eqref{eqn:pinsker}\end{tabular} \\ \hline
\begin{tabular}[c]{@{}c@{}}Unnorm.\ RCI\\ $s\in (0,1]\cup\{\infty\}$\end{tabular} & \begin{tabular}[c]{@{}c@{}} $\!\!$ Soft-covering \& $\!\!$ \\ truncated product dist.\ \end{tabular}                                          & \begin{tabular}[c]{@{}c@{}}Implied by norm.\ RCI \\ $s\in (0,1]\cup\{\infty\}$\end{tabular}                    \\ \hline
\begin{tabular}[c]{@{}c@{}}Norm.\ RCI \\ $s\in (0,1]\cup\{\infty\}$\end{tabular}  & \begin{tabular}[c]{@{}c@{}}Implied by unnorm.\ RCI \\ $s\in (0,1]\cup\{\infty\}$\end{tabular}                                   &  \begin{tabular}[c]{@{}c@{}} Lem~\ref{lem:lb_renyi} \& Chain rule \\$\!\!$ for coupling sets (Lem.~\ref{lem:coupling}) $\!\!$  \end{tabular}            \\ \hline
\end{tabular}}}
\end{table}

\subsection{Sketch of the Achievability   of Theorem~\ref{thm:tv_ci} } \label{sec:ach_tv}

First, we note from Theorem~\ref{thm:wyner_cts} and Remark~\ref{rmk:unnorm}  that if $R>C_{\Wyner}(\pi_{XY})$, there   exists  a sequence of codes $\{ (P_{X^n | M_n},P_{Y^n | M_n})\}_{n\in\bbN}$ such that the unnormalized relative entropy  $D(P_{X^nY^n}\|\pi_{XY}^n)$ vanishes. By Pinsker's inequality in~\eqref{eqn:pinsker} (which says that the relative entropy dominates the TV distance), we see that the TV distance also vanishes. 

Alternatively, one can  directly leverage  the soft-covering lemma for the TV distance   (as stated in~\eqref{eqn:soft-covering-tv} in Lemma~\ref{lem:soft-covering}) to show that if $R>C_{\Wyner}(\pi_{XY})$, there exists a sequence  of rate-$R$ distributed source simulation codes such that TV distance between $P_{X^nY^n}$ and $\pi_{XY}^n$ vanishes. 

\subsection{Sketch of the Exponential Strong Converse   of Theorem~\ref{thm:tv_ci} }
The proof of the exponential strong converse requires a careful application of the  information spectrum method~\cite{VH94} due to~\citet{oohama18}, who used this technique to provide the first proof of the strong converse for the Wyner-Ziv problem~\cite{wynerziv}. The idea is to express the TV distance  $| P_{X^nY^n}-\pi_{XY}^n |$  in terms of the probability of some ``error events''. Roughly speaking, for any synthesis code with 
\begin{equation}
\frac{1}{n}\log |\calM_n|\le R, \label{eqn:logMleR}
\end{equation}
we can lower bound the TV distance as 
\begin{equation}
|P_{X^nY^n}-\pi_{XY}^n |\ge 1-\Pr\big(  \calA_1\cap\calA_2\cap \calA_3 ~\big|~  \calS \big)- 3\cdot  2^{-n\eta},  \label{eqn:lower_bd_TV}
\end{equation}
where for any $\eta>0$ and $Q_{X^nY^n}$ and $Q_{X^nY^n|M_n}$, the sets above are defined as
\begin{align}
\calA_1 &:=\bigg\{ (x^n,y^n) :\frac{1}{n}\log\frac{\pi_{XY}^n(x^n,y^n)}{Q_{X^nY^n}(x^n,y^n)} \ge-\eta\bigg\}\times\calM_n \\
\calA_2 &:=\bigg\{ (x^n,y^n,m) \!:\!\frac{1}{n}\log\frac{ P_{X^n|M_n}(x^n|m)P_{Y^n|M_n}(y^n|m)}{Q_{X^nY^n|M_n}(x^n,y^n|m)}\! \ge\!-\eta\bigg\}  \\
\calA_3 &:=\bigg\{ (x^n,y^n,m) :\frac{1}{n}\log\frac{ Q_{X^nY^n|M_n}(x^n,y^n|m) }{ \pi_{XY}^n(x^n,y^n)} \le R+\eta\bigg\}  ,
\end{align}
and 
$\calS := \big(\supp(\pi_{XY}^n)\times \calM_n\big)\cap\supp\big(P_{X^nY^nM_n}\big)$. 
 The rest of the proof follows from choosing  $Q_{X^nY^n}$ and $Q_{X^nY^n|M_n}$ appropriately; the freedom to allow us to do so  in converse proofs was first noticed by~\citet{Hayashi03}. One then applies a Chernoff bound to the probability in~\eqref{eqn:lower_bd_TV} and single-letterizes the resultant exponent. All in all, we obtain that under~\eqref{eqn:logMleR}, 
\begin{equation}
|P_{X^nY^n}-\pi_{XY}^n |\ge 1-2^{-nF(R)},
\end{equation}
where $F(R)$ is an exponent function that is strictly positive when $R<C_{\Wyner}(\pi_{XY})$ and equal to $0$ otherwise. This completes the proof of the exponential strong converse.

\subsection{Sketch of the Achievability   of Theorem~\ref{thm:RCI}(b)}
For any $s\in (-1,0]$, the fact that $\tilT_{1+s}(\pi_{XY}) \le C_{\Wyner}(\pi_{XY})$ is obvious due to monotonically non-decreasing nature of  $s\mapsto\tilT_{1+s}(\pi_{XY})$ and the fact that $\tilT_{1}(\pi_{XY})  =C_{\Wyner}(\pi_{XY})$.
\subsection{Sketch of the Converse   of Theorem~\ref{thm:RCI}(b)} \label{sec:sneg}
The converse part for the case $s\in (-1,0]$, i.e., that $T_{1+s}(\pi_{XY}) \ge C_{\Wyner}(\pi_{XY})$ is more interesting and leverages a Pinsker-like relationship between the TV distance and the R\'enyi divergence due to \citet{sasonRenyi}, which we restate here as it may be of independent interest. 

\begin{lemma}[Pinsker-like inequality for R\'enyi divergence] 
\label{lem:Renyiinequality} For any $s>-1$,
\begin{align}
  \inf_{P_{X},Q_{X}:\left|P_{X}-Q_{X}\right|\geq\epsilon}D_{1+s}(P_{X}\|Q_{X}) 
 & =\inf_{q\in[0,1-\epsilon]}d_{1+s}(q+\epsilon\|q),\label{eqn:renyi_eq_tv}
\end{align}
and for any $s\in(-1,0)$,
\begin{align}
  \inf_{q\in[0,1-\epsilon]}d_{1+s}(q+\epsilon\|q)
 & \geq\left[\min\left\{ 1,\frac{1 \! +\! s}{s}\right\} \log\frac{1}{1\! -\! \epsilon}\! +\! \frac{1}{s}\log 2\right]^{+},\label{eqn:lower_bd_renyi}
\end{align}
where
\begin{equation}
d_{1+s}(p\|q):=\left\{\begin{array}{cc}
\displaystyle\frac{1}{s}\log\left(p^{1+s}q^{-s}+\bar{p}^{1+s}\bar{q}^{-s} \right), & s\geq-1,s\neq0 \vspace{.03in}\\
\displaystyle p\log\frac{p}{q}+\bar{p}\log\frac{\bar{p}}{\bar{q}}, & s=0
\end{array}
\right.  \label{eqn:bin_renyi}
\end{equation}
denotes the {\em binary R\'enyi divergence of order $1+s$}.
\end{lemma}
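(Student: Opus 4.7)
The plan is to prove the two assertions separately, as they have quite different characters.

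For the equality~\eqref{eqn:renyi_eq_tv}, my approach is to reduce the optimization over arbitrary $(P_X, Q_X)$ to one over Bernoulli pairs via the data processing inequality for R\'enyi divergence, which holds for every order $\alpha \geq 0$~\cite{Erven}. For the $\geq$ direction, given $(P_X, Q_X)$ with $|P_X - Q_X| \geq \epsilon$, the variational characterization of TV distance produces a set $A \subset \calX$ with $P_X(A) - Q_X(A) = \delta \geq \epsilon$. Writing $Z := \mathbbm{1}\{X \in A\}$ and applying data processing gives
\begin{equation}
D_{1+s}(P_X \| Q_X) \;\geq\; D_{1+s}(P_Z \| Q_Z) \;=\; d_{1+s}(q + \delta \| q),
\end{equation}
where $q := Q_X(A) \in [0, 1-\delta]$. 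A short calculus argument establishing that the map $\delta \mapsto d_{1+s}(q + \delta \| q)$ is non-decreasing on $[0, 1-q]$ then passes from $\delta \geq \epsilon$ to the single-letter infimum on the right-hand side of~\eqref{eqn:renyi_eq_tv}. For the reverse $\leq$ direction, for each $q \in [0, 1-\epsilon]$ the pair $P := \mathrm{Bern}(q+\epsilon)$, $Q := \mathrm{Bern}(q)$ satisfies $|P - Q| = \epsilon$ and $D_{1+s}(P \| Q) = d_{1+s}(q+\epsilon \| q)$, matching the infimum in~\eqref{eqn:renyi_eq_tv}.

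For the lower bound~\eqref{eqn:lower_bd_renyi}, my approach is to argue directly with the binary divergence. Write
\begin{equation}
d_{1+s}(q+\epsilon \| q) = \tfrac{1}{s} \log F(q, \epsilon), \quad F(q, \epsilon) := (q+\epsilon)^{1+s} q^{-s} + (1 - q - \epsilon)^{1+s}(1-q)^{-s}.
\end{equation}
For $s \in (-1, 0)$ we have $1/s < 0$, so what is needed is a uniform upper bound on $F(q, \epsilon)$ over $q \in [0, 1-\epsilon]$. Since both $1+s$ and $-s$ lie in $(0, 1)$ and sum to $1$, each summand $(q+\epsilon)^{1+s} q^{-s}$ can be estimated using concavity of $x \mapsto x^{1+s}$ together with elementary H\"older-type inequalities for fractional exponents. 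Bounds of the form $a^{1+s} b^{-s} \leq \max\{a, b\}$ combined with $u + v \leq 2\max(u, v)$ fold the two summands into a single factor and extract the $\tfrac{1}{s}\log 2$ correction; the remaining factor---which scales like $(1-\epsilon)^{1+s}$ in the worst direction of $q$---contributes the $\tfrac{1+s}{s}\log\tfrac{1}{1-\epsilon}$ term after applying $\tfrac{1}{s}\log(\cdot)$. The prefactor $\min\{1, (1+s)/s\}$ emerges from interpolating this estimate against the boundary case $q = 0$, and the $[\cdot]^+$ encodes the nonnegativity of $d_{1+s}$.

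The main obstacle is the sharp uniform upper bound on $F(q, \epsilon)$ in the second part. Because both exponents in the H\"older product lie in $(0,1)$, naive convexity arguments are insufficient, and the worst-case $q$ is characterized implicitly by a transcendental first-order condition $\partial F / \partial q = 0$. One must therefore sacrifice some tightness to obtain a bound in closed form with the correct $\log\tfrac{1}{1-\epsilon}$ and $\log 2$ structure. In contrast, the first part is essentially routine, relying only on data processing, the variational formula for TV distance, and a one-variable monotonicity fact. As sanity checks, the resulting bound should vanish as $\epsilon \to 0$ and degenerate gracefully as $s \to -1^+$ or $s \to 0^-$.
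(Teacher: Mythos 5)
Your argument for the equality~\eqref{eqn:renyi_eq_tv} is sound and is the natural one: data processing applied to the indicator of a set achieving the TV distance reduces the infimum to the Bernoulli case, and passing from $\delta = |P_X - Q_X| \geq \epsilon$ down to $\epsilon$ follows from the monotonicity of $d_{1+s}(q+\delta\|q)$ in $\delta$, which a direct derivative computation confirms (the derivative of $p^{1+s}q^{-s}+\bar p^{1+s}\bar q^{-s}$ in $p$ has the sign of $s$ when $p>q$, cancelling the sign of $1/s$). The monograph cites this lemma to Sason without reproducing a proof, so there is no in-text argument to compare against; your route is the expected one.

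Your sketch for the lower bound~\eqref{eqn:lower_bd_renyi}, however, has a genuine gap. The estimate $a^{1+s}b^{-s}\le\max\{a,b\}$ applied to the two summands of $F(q,\epsilon)$ only gives $F\le(q+\epsilon)+\bar q=1+\epsilon$, which is \emph{worse} than the trivial bound $F\le 1$ and therefore cannot produce any positive lower bound on $d_{1+s}=\tfrac{1}{s}\log F$ (note $1/s<0$). The ingredient you are missing is the monotonicity of each summand \emph{in $q$}: for $s\in(-1,0)$ the map $q\mapsto(q+\epsilon)^{1+s}q^{-s}$ is increasing on $[0,1-\epsilon]$ (its log-derivative $\tfrac{1+s}{q+\epsilon}+\tfrac{-s}{q}$ is a sum of two positive terms), so it is bounded by its value at the right endpoint, $(1-\epsilon)^{-s}$; by symmetry the other summand is decreasing and bounded by $(1-\epsilon)^{1+s}$. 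Hence
\begin{equation}
F(q,\epsilon)\le(1-\epsilon)^{-s}+(1-\epsilon)^{1+s}\le 2\,(1-\epsilon)^{\min\{-s,\,1+s\}},
\end{equation}
and applying $\tfrac{1}{s}\log(\cdot)$ gives $d_{1+s}(q+\epsilon\|q)\ge\min\bigl\{1,\tfrac{1+s}{-s}\bigr\}\log\tfrac{1}{1-\epsilon}+\tfrac{1}{s}\log 2$.

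Two further remarks. First, the coefficient this produces is $\min\{1,(1+s)/(-s)\}$, which is positive for $s\in(-1,0)$; by contrast the coefficient $\min\{1,(1+s)/s\}$ appearing in the lemma statement equals $(1+s)/s<0$ for every $s\in(-1,0)$, so the stated right-hand side is nonpositive and $[\cdot]^+$ collapses it to zero---a trivially true but useless bound that also vacates the application in the proof of Theorem~\ref{thm:RCI}(b). The sign in the monograph is therefore almost certainly a typo for $(1+s)/(-s)$. Second, the transcendental first-order condition you worry about never arises: one does not optimize $F(q,\epsilon)$ jointly over $q$, but bounds each summand at its own extremal $q$ (the two endpoints of $[0,1-\epsilon]$) and pays the factor of $2$.
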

\begin{remark} \label{rmk:sason}
\citet{sasonRenyi} showed that 
\begin{equation}
\inf_{q\in[0,1-\epsilon]}d_{1/2}(q+\epsilon\|q)= \log\frac{1}{1-\epsilon^2}.
\end{equation}
\end{remark}

\begin{remark} \label{rmk:gila}
\citet{Gilardoni} showed for $\alpha=1+s\in (0,1)$ that 
\begin{equation}
\inf_{P_{X},Q_{X}:\left|P_{X}-Q_{X}\right|\geq\epsilon} D_{\alpha}(P_X\|Q_X)\ge\frac{1}{2}\alpha\epsilon^2+\frac{1}{9}\alpha(1+5\alpha	-5\alpha^2)\epsilon^4. \label{eqn:gilardoni}
\end{equation}
These two remarks imply that the  minimal R\'enyi divergence of order less than $1$ subject to the TV distance between the two distributions  having TV distance $\epsilon$  behaves quadratically in $\epsilon$. 
Thus, these can be considered as Pinsker-type inequalities for the R\'enyi divergence.
\end{remark}
Using Lemma~\ref{lem:Renyiinequality}, the converse part for Theorem~\ref{thm:RCI}(b) is obvious. If $R< C_{\Wyner}(\pi_{XY})$, the TV distance converges to $1$ exponentially fast. In other words, 
\begin{equation}
| P_{X^nY^n} -\pi_{XY}^n|\ge 1-2^{-n\delta_n}
\end{equation}
for some sequence $\{\delta_n\}_{n\in\bbN}\subset [0,\infty)$ satisfying $\liminf_{n\to\infty}\delta_n>0$. Thus using Lemma~\ref{lem:Renyiinequality} (with $1-2^{-n\delta_n}$ in place of $\epsilon$), 
\begin{align}
\liminf_{n\to\infty}\frac{1}{n}D_{1+s}( P_{X^nY^n} \|\pi_{XY}^n)&\ge\liminf_{n\to\infty} \left\{ \min\bigg\{ 1,\frac{1+s}{s}\bigg\}\delta_n +\frac{\log 2}{ns}   \right\} \nn\\*
&> 0 ,
\end{align}
showing that if the rate $R$ is strictly smaller than Wyner's common information $C_{\Wyner}(\pi_{XY})$, the normalized R\'enyi divergence cannot converge to zero. Thus, $T_{1+s}(\pi_{XY})\ge  C_{\Wyner}(\pi_{XY})$ for any $s\in (-1,0)$. The case $s=0$ follows from the converse for Wyner's common information. 

\subsection{Sketch of the Achievability   of Theorem~\ref{thm:RCI}(c)} \label{sec:ach_RCIc}
We only consider $s\in (0,1]$ since the proof ideas for   $s =\infty$ are similar to those for  $s\in (0,1]$. The achievability of Theorem~\ref{thm:RCI}(c)  follows by carefully evaluating the one-shot soft-covering result in Lemma~\ref{lem:one-shot-sc}. We set $\pi_U , P_{U|W}, P_W$, and $R$ to be $\pi_{XY}^n$, $P_{X^nY^n|W^n}=P_{X^n|W^n}P_{Y^n|W^n}$, $P_{W^n}$ and $nR$ respectively. Note that if there exists a sequence of distributions $\{ P_{W^n} P_{X^n|W^n}P_{Y^n|W^n}  \}_{n\in\bbN}$ such that $D_{1+s}(P_{X^nY^n}\|\pi_{XY}^n)\to 0$ and 
\begin{equation}
R> \limsup_{n\to\infty}\frac{1}{n}D_{1+s}(P_{X^nY^n|W^n}\|\pi_{XY}^n|P_{W^n}) , \label{eqn:Rlimsup}
\end{equation}
then from Lemma~\ref{lem:one-shot-sc}, there exists a sequence of  distributed source simulation codes $\{  (P_{X^n|M_n},P_{Y^n|M_n}  )\}_{n\in\bbN}$  such that 
\begin{align}
&\limsup_{n\to\infty}D_{1+s}\big( P_{X^nY^n|M_n}\big\|\pi_{X^nY^n}|P_{M_n} \big)\nn\\
&\le\limsup_{n\to\infty}\frac{1}{s}\log\Big[\exp \big(sD_{1+s}(P_{X^nY^n|W^n}\|\pi_{XY}^n|P_{W^n}) -nsR\big) \nn\\*
&\qquad+\exp\big(sD_{1+s}(P_{X^nY^n}\|\pi_{XY}^n )\big)\Big]\\
&\le\limsup_{n\to\infty}  \frac{1}{s}\! \log\Big[\! \exp \! \big(sD_{1+s}(P_{X^nY^n|W^n}\|\pi_{XY}^n|P_{W^n})\!  -\! nsR\big) \!+\! 1\Big]\! \! \\*
&=0,
\end{align}
where the last equality follows from \eqref{eqn:Rlimsup}.
Thus, for $s\in (0,1]$, 
\begin{equation}
\tilT_{1+s}(\pi_{XY})\le \inf \,\limsup_{n\to\infty}\frac{1}{n}D_{1+s}(P_{X^nY^n|W^n}\|\pi_{XY}^n|P_{W^n}), \label{eqn:upper_Tpos}
\end{equation}
where the infimum is over all sequences  $\{ P_{W^n} P_{X^n|W^n}P_{Y^n|W^n}  \}_{n\in\bbN}$ such that $D_{1+s}(P_{X^nY^n}\|\pi_{XY}^n)\to 0$.  

\enlargethispage{-\baselineskip}
As a result, the achievability proof reduces to finding a tractable  joint  distribution $P_{W^n} P_{X^n|W^n}P_{Y^n|W^n} $ such that the conditional R\'enyi divergence in \eqref{eqn:upper_Tpos} can be single-letterized. We first choose a distribution $Q_{WXY} \in \calP(\calW\times\calX\times\calY)$ such that $Q_{XY}=\pi_{XY}$ and 

\noindent
\begin{align}
P_{W^n}(w^n) &\propto Q_W^n(w^n)\mathbbm{1} \left\{ w^n\in\calT_{\epsilon'}(Q_W) \right\} ,\label{eqn:trunc1}\\
P_{X^n|W^n}(x^n|w^n) &\propto Q_{X|W}^n(x^n|w^n)\mathbbm{1} \left\{ x^n\in\calT_{\epsilon}(Q_{WX}|w^n) \right\},\label{eqn:trunc2}\\
P_{Y^n|W^n}(x^n|w^n) &\propto Q_{Y|W}^n(y^n|w^n)\mathbbm{1} \left\{ y^n\in\calT_{\epsilon}(Q_{WY}|w^n) \right\},\label{eqn:trunc3}
\end{align}
where $0<\epsilon'<\epsilon\le 1$. This triple is known as a {\em truncated product  distribution}, also used by \citet{Vellambi} and has two desirable features. First, it behaves like a {\em bona fide} product distribution. Indeed, 
\begin{equation}
P_{X^nY^n}(x^n,y^n)\le\frac{\pi_{XY}^n(x^n,y^n)}{1-\gamma_n}\quad\mbox{for all}\;\, (x^n,y^n) \in \calX^n\times\calY^n,
\end{equation}
where, roughly speaking, $\gamma_n=o(1)$ represents the probability of   atypical sets.  This property ensures that the constraint   $D_{1+s}(P_{X^nY^n}\|\pi_{XY}^n)\to 0$ is satisfied and the single-letterization of the conditional R\'enyi divergence in~\eqref{eqn:upper_Tpos} is tractable. Secondly,  any triple of sequences $(w^n,x^n,y^n)$ generated from the truncated product distribution  has marginal types $T_{w^n,x^n}$ and $T_{w^n,y^n}$ that are close to $Q_{WX}$ and $Q_{WY}$ respectively, so necessary approximations of types by distributions can be done to yield that the right-hand side of \eqref{eqn:upper_Tpos} is not larger than $\oGamma_{1+s}(\pi_{XY})$,  which in turn implies that $\tilT_{1+s}(\pi_{XY})\le\oGamma_{1+s}(\pi_{XY})$ for $s\in (0,1]$. 

Truncated product distributions will also be used extensively in the next section on exact common information; see Section~\ref{sec:type_over}.
\subsection{Sketch of the Converse   of Theorem~\ref{thm:RCI}(c)}
Note that for $s\in (0,1]$, $T_{1+s}(\pi_{XY})\ge C_{\Wyner}(\pi_{XY})$ is obvious in view of the monotonically non-decreasing nature of $s\mapsto T_{1+s}$. Hence, we only have to show that $T_{1+s}(\pi_{XY})\ge\uGamma_{1+s}(\pi_{XY})$ for $s\in (0,1]$. This proceeds in a few steps and we highlight the key ideas.

First, we derive a lower bound for $T_{1+s}(\pi_{XY})$ in terms of a  multi-letter expression. This hinges on the following non-asymptotic converse lemma which is due to the present authors~\cite{YuTan2019_wiretap}. 
\begin{lemma}\label{lem:lb_renyi}
Let $M$ be a uniform random variable on the set $[L]$ and let $P_{X|M}$ be an arbitrary stochastic map, whence $P_{XM}(x,m) =L^{-1}P_{X|M}(x|m)$ for all $(x,m)\in\calX\times[L]$. Then for $s\in [0,\infty]$ and any distribution $\pi_X$, we have 
\begin{align}
D_{1+s}(P_X\|\pi_X) \ge\max\big\{ D_{1+s}(P_{MX}\|P_M\pi_X)-\log L, D_{1+s}(P_X\|\pi_X)\big\}.
\end{align}
\end{lemma}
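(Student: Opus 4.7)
The second argument of the max on the right-hand side is trivially a lower bound for $D_{1+s}(P_X\|\pi_X)$ (it equals it), so the plan is to focus on establishing the non-trivial bound
\begin{equation}
D_{1+s}(P_{MX}\|P_M\pi_X) \;\le\; \log L + D_{1+s}(P_X\|\pi_X), \label{eqn:target}
\end{equation}
which, after rearrangement, is exactly the first term in the max.

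My plan for the main range $s\in(0,\infty)$ is to unfold both sides using Definition~\eqref{eqn:renyi_div} and work at the level of the likelihood ratios. Writing $f_m(x) := P_{X|M}(x|m)/\pi_X(x)$ and $g(x) := P_X(x)/\pi_X(x)$, the hypothesis $P_M(m)=1/L$ immediately yields $g(x)=L^{-1}\sum_m f_m(x)$. Substituting $P_{MX}(m,x)=L^{-1}P_{X|M}(x|m)$ and $P_M(m)\pi_X(x)=L^{-1}\pi_X(x)$ into the R\'enyi divergence gives the clean identity
\begin{equation}
D_{1+s}(P_{MX}\|P_M\pi_X) \;=\; -\frac{\log L}{s}+\frac{1}{s}\log\sum_{m,x}\pi_X(x)\,f_m(x)^{1+s},
\end{equation}
while $D_{1+s}(P_X\|\pi_X)=\frac{1}{s}\log\sum_x\pi_X(x)g(x)^{1+s}$. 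So \eqref{eqn:target} reduces to the pointwise inequality
\begin{equation}
\sum_{m}f_m(x)^{1+s} \;\le\; L^{1+s}\,g(x)^{1+s} \;=\; \Big(\sum_m f_m(x)\Big)^{1+s},
\end{equation}
for each $x$, which is the elementary fact that $\sum_m a_m^{p}\le\big(\sum_m a_m\big)^{p}$ for nonnegative reals $a_m$ and any $p\ge 1$ (proven by homogeneity: normalize $\sum_m a_m=1$ and note $a_m^p\le a_m$). Integrating this pointwise inequality against $\pi_X$ and taking $\frac{1}{s}\log(\cdot)$ will deliver \eqref{eqn:target}.

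The two boundary cases I would handle separately by the usual limiting arguments, though they can also be verified directly. For $s=0$, \eqref{eqn:target} reads $D(P_{MX}\|P_M\pi_X)\le\log L+D(P_X\|\pi_X)$, which follows instantly from the chain-rule identity $D(P_{MX}\|P_M\pi_X)=I(M;X)+D(P_X\|\pi_X)$ together with $I(M;X)\le H(M)=\log L$. For $s=\infty$, \eqref{eqn:target} becomes $\sup_{m,x}f_m(x)\le L\sup_x g(x)$, which follows from $L\,g(x)=\sum_m f_m(x)\ge \max_m f_m(x)$. Taking $s\downarrow 0$ and $s\uparrow\infty$ in the case treated above then confirms these endpoints by continuity of the R\'enyi divergence in its order.

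The only real subtlety I anticipate is orienting the key inequality correctly: one must apply $\sum_m a_m^{1+s}\le(\sum_m a_m)^{1+s}$ (valid for $p=1+s\ge 1$) rather than the reverse Jensen-type bound that holds for $p\in[0,1]$; this is precisely why the lemma is stated for $s\ge 0$. Everything else reduces to careful bookkeeping of the $L^{-1}$ factors and is routine.
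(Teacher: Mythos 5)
Your proof is correct, and the algebra is carefully organized: the reduction to the pointwise bound $\sum_m f_m(x)^{1+s}\le\bigl(\sum_m f_m(x)\bigr)^{1+s}$ is exactly what the claim requires, the degenerate case $P_X\not\ll\pi_X$ is automatically covered (if some $P_{X|M}(x|m)>0$ with $\pi_X(x)=0$ then $P_X(x)>0$ too, so both sides of the nontrivial bound are $+\infty$), and the $s=0$ and $s=\infty$ endpoints are handled cleanly. The monograph does not reproduce a proof for this lemma (it simply cites \citet{YuTan2019_wiretap}), so a line-by-line comparison is not possible, but your argument is the natural direct one: unfold the Sibson conditional R\'enyi divergence under the uniformity of $M$, use $L\,g(x)=\sum_m f_m(x)$, and invoke superadditivity of $t\mapsto t^{1+s}$ for $s\ge 0$. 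One cosmetic remark: you correctly observe that the second argument of the max on the right-hand side is trivially equal to the left-hand side, so the lemma's entire content is the single bound $D_{1+s}(P_{MX}\|P_M\pi_X)\le\log L+D_{1+s}(P_X\|\pi_X)$; the redundant term in the max seems to be there merely for uniformity with how the result is later applied.
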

By particularizing $\pi_X$, $P_{X|M}$, $P_M$, and $\log L$ to be $\pi_{XY}^n$, $P_{X^n|M_n}$, $P_{Y^n|M_n}$, $P_{M_n}$ and $nR$ respectively, Lemma~\ref{lem:lb_renyi} implies that 
\begin{equation}
T_{1+s}(\pi_{XY})\ge\inf\, \limsup_{n\to\infty}\,\frac{1}{n}D_{1+s}\big(P_{M_n X^n Y^n}\big\|P_{M_n}\pi_{XY}^n\big), \label{eqn:lb_T}
\end{equation}
where the infimum runs over all distributed source simulation codes  $\{(P_{X^n|M_n},  P_{Y^n|M_n})\}_{n\in\bbN}$ such that $\frac{1}{n}D_{1+s}(P_{X^n Y^n}\|\pi_{XY}^n)\to 0$.

Now, it is easy to check by elementary calculus (see, for example, \citet{shayevitz2011renyi} and~\citet{Anant2018}) that the R\'enyi divergence admits a variational representation of the form 
\begin{equation}
D_{1+s}(P\|Q)= \max_{\tilQ \in\calP(\calX)}  \frac{1}{s}\bigg\{\sum_{x \in\calX} \tilQ(x)\log \big(P^{1+s}(x)Q^{-s}(x)\big) + H(\tilQ) \bigg\}. \label{eqn:var_Renyi}
\end{equation}
By particularizing $P$, $Q$, and $\tilQ$   above to $P_{M_nX^nY^n}$, $P_{M_n}\pi_{XY}^n$, and $\tilQ=P_{M_n}Q_{X^nY^n|M_n}$ respectively, and performing some algebraic manipulations,~\eqref{eqn:lb_T} yields
\begin{align}
&\hspace{-.3in}T_{1+s}(\pi_{XY})\ge\inf\, \limsup_{n\to\infty} \, -\frac{1+s}{s}H(X^nY^n|M_n) \nn\\*
&\hspace{-.3in}\;\;+\frac{1}{s}\!\max_{ \substack{Q_{X^nY^n|M_n} \in \\ \calC(P_{X^n|M_n}, P_{Y^n|M_n})}} \!\bbE_Q\left[  \log\frac{1}{(\pi^n(X^n,Y^n))^s Q(X^n,Y^n|M_n) }\right], \label{eqn:multi-letter1}
\end{align}
where the infimum runs over the  same sequence of distributions under the same constraints as in~\eqref{eqn:lb_T}.

The multi-letter expression in~\eqref{eqn:multi-letter1} consists of two parts. The entropy term can be single-letterized using standard techniques in network information theory. In particular,  
\begin{align}
\!\frac{1}{n}H(X^n Y^n |M_n) & =  \frac{1}{n}\sum_{i=1}^n H(X_i|X^{i-1}M_n) \!+\!\frac{1}{n}\sum_{i=1}^n H(Y_i|Y^{i-1}M_n) \!\\
&=  H(X_J | X^{J-1} M_n J)+H(Y_J | Y^{J-1} M_n J),
\end{align}
where we introduced the random variable $J \sim \mathrm{Unif}[n] $ which is independent of $(M_n, X^n,Y^n)$.
 The second term is more involved but the main ingredient  for simplifying it is  the chain rule for couplings (Lemma~\ref{lem:coupling}) which implies that for any function $f:\calP(\calX^n\times\calY^n)\to\bbR$, 
\begin{align}
&\hspace{-.2in}\max_{\substack{ Q_{X^nY^n|M_n} \in\calC(P_{X^n|M_n}, P_{Y^n|M_n})}}f(Q_{X^nY^n|M_n}) \nn\\*
&\hspace{-.2in}\quad\ge \max_{\substack{ Q_{X^nY^n|M_n} \in \\ \prod_{i=1}^n\calC(P_{X_i|X^{i-1}M_n}, P_{Y_i|Y^{i-1}M_n})}}f\bigg(\prod_{i=1}^n Q_{X_iY_i|X^{i-1}Y^{i-1}W} \bigg) . \label{eqn:consequence_chain}
\end{align} 
  Observe that for a fixed $m \in \calM_n$ and  $Q_{X^n Y^n |M_n=m}$, we have
\begin{align}
&\bbE_{Q_{X^n Y^n |M_n=m}}\left[  \log\frac{1}{(\pi^n(X^n,Y^n))^s Q(X^n,Y^n|M_n) }\right] \nn\\*
&=\sum_{i=1}^n \sum_{x_i, y_i} \sum_{x^{i-1}, y^{i-1}} Q(x^{i-1}, y^{i-1}|m)  Q(x_i, y_i | x^{i-1},y^{i-1}, m) \nn\\*
&\qquad\times \log\frac{1}{\pi(x_i,y_i)^s  Q(x_i, y_i | x^{i-1},y^{i-1}, m)} \\
&\ge \sum_{i=1}^n \!\min_{ \substack{\tilQ_{X^{i-1}Y^{i-1}|M_n}\in \\  \calC( P_{X^{i-1} | M_n}, P_{Y^{i-1} | M_n}) }}\! \sum_{x_i, y_i} \sum_{x^{i-1}, y^{i-1}}\tilQ_{X^{i-1}Y^{i-1}|M_n}(x^{i-1},y^{i-1} |m)  \nn\\*
&\;\;  \times Q(x_i, y_i | x^{i-1}, y^{i-1}, m)  \log\frac{1}{\pi(x_i,y_i)^s  Q(x_i, y_i | x^{i-1},y^{i-1}, m)} . \label{eqn:simplifyQ}
\end{align}
Now, the main idea from here onwards is to use the consequence of the chain rule for couplings in~\eqref{eqn:consequence_chain} and then to justify swapping the maximization in~\eqref{eqn:multi-letter1} and the minimization in~\eqref{eqn:simplifyQ}. Then we see that we will be  left with an inner maximization over couplings $Q_{XY|UVW} \in\calC(P_{X|UW}, P_{Y|VW})$ where $U := X^{J-1}$, $ V=Y^{J-1}$, $X:=X_J$, $Y:=Y_J$ and $W := (M_n, J)$. These ideas, together with   a few additional approximation arguments, gives rise to the maximization over couplings $Q_{XY} \in\calC(P_{X|W=w},P_{Y|W=w'})$  in the definition of $\rvH_s$ and minimization over couplings  $Q_{WW'} \in \calC(P_W, P_W)$ in~\eqref{eqn:Gamma_down}. This   completes our sketch of the proof of the lower bound (converse) $T_{1+s}(\pi_{XY})\ge\uGamma_{1+s}(\pi_{XY})$  for the case $s\in (0,1]$.
 
\chapter{Exact Common Information}
\label{ch:exact}
\newcommand{\Ex}{\mathrm{Ex}}
\newcommand{\overG}{\overline{G}}
In this section, we depart from two key assumptions that we employed in the previous sections on Wyner's and R\'enyi common information. These assumptions are that {\em fixed-length} codes are used and {\em approximate} generation of the target distribution $\pi_{XY}^n$ is desired. By {\em fixed-length}, we mean that the shared or common randomness $M_n$ in Fig.~\ref{fig:source_sim} takes on values in the set $\calM_n$, which contains no more than $2^{nR}$ elements. Equivalently the bit string that corresponds to $M_n$ has length no larger than $nR$. By {\em approximate} generation, we mean that we only demand that some notion of the ``discrepancy'' between $P_{X^nY^n}$ and $\pi_{XY}^n$ converges to zero as the length of the code grows. The metrics that govern  the discrepancy include the (normalized and unnormalized) R\'enyi divergence and the TV distance. 

In this section, we consider the distributed source simulation problem under the assumptions that the codes used are allowed to be {\em variable-length} and we demand that the reconstruction $P_{X^nY^n}$ of the target distribution $\pi_{XY}^n$ be {\em exact} for some blocklength  $n\in\bbN$; this formulation is due to \citet{KLE2014}.  These distinctions are  analogous to the problem of lossless source coding~\cite{Cov06} in which there are also two  formulations. The first, which mirrors our discussion in Section~\ref{ch:renyi}, is of {\em fixed-length lossless source coding} with approximate reconstruction of the source. \citet{Shannon48} showed using ideas from what is now known as the {\em asymptotic equipartition property} (AEP) \cite[Ch.~3]{Cov06} that the minimum rate of compression is the entropy of the source. The second formulation, analogous to the current section, is {\em variable-length source coding} in which each source symbol is allowed to be encoded to bit strings of varying lengths and the {\em minimal average codeword length} is sought  under the constraint of zero-error reconstruction. In this case, the asymptotic minimal average per symbol codeword length is also the entropy of the source. This can be achieved via a variety of schemes including the Shannon--Fano--Elias code~\cite[Ch.~5]{Cov06} or the Huffman code~\cite{Huffman}.

One of the key benefits of variable-length coding in  data compression is the ability to obtain {\em exact} reconstructions. In contrast, if we are constrained to use fixed-length codes for the distributed source simulation problem, then we would require  a much higher rate to obtain an exact reconstruction, namely, $ \min\{\log |\calX|,\log|\calY|\}$ in the worst case. 
Since the fundamental limits for the lossless and zero-error source compression problems are the same---i.e., the Shannon entropy $H(X)$---it is natural to wonder whether the same is true for the distributed source simulation problem. This was an open problem posed at the 2014 International Symposium on Information Theory by \citet{KLE2014}.


In this section, we answer this question in the negative. The way we do so is to 
show a surprising equivalence between the unnormalized R\'enyi common information of order $\infty$ and the  exact common information. This is done by relating both problems at the operational level. To wit, we show that if there exists a rate-$R$ exact common information code, this code can be suitably modified to be a rate-$R$ order-$\infty$ R\'enyi common information code and {\em  vice versa}.  Thus, the family of R\'enyi common information provides a {\em bridge} between Wyner's common information and the exact common information; see Fig.~\ref{fig:bridge}. We recall that the R\'enyi common information is monotonically non-decreasing in its order and as we have seen from Section~\ref{sec:renyi_dsbs} for the DSBS, it can be {\em strictly} increasing.  Thus, exact generation of a joint source requires strictly larger rate compared to approximate generation in general, answering the open problem posed by \citet{KLE2014}. We identify classes of sources for which the exact common information is equal to Wyner's common information and provide intuition for why no extra rate is needed for exact generation of these sources~\cite{Vellambi}.  We extend our discussion to sources with continuous alphabets and provide bounds on the exact common information for the bivariate Gaussian source.

\begin{figure}[!ht]
\centering
\begin{overpic}[width=.9\textwidth]{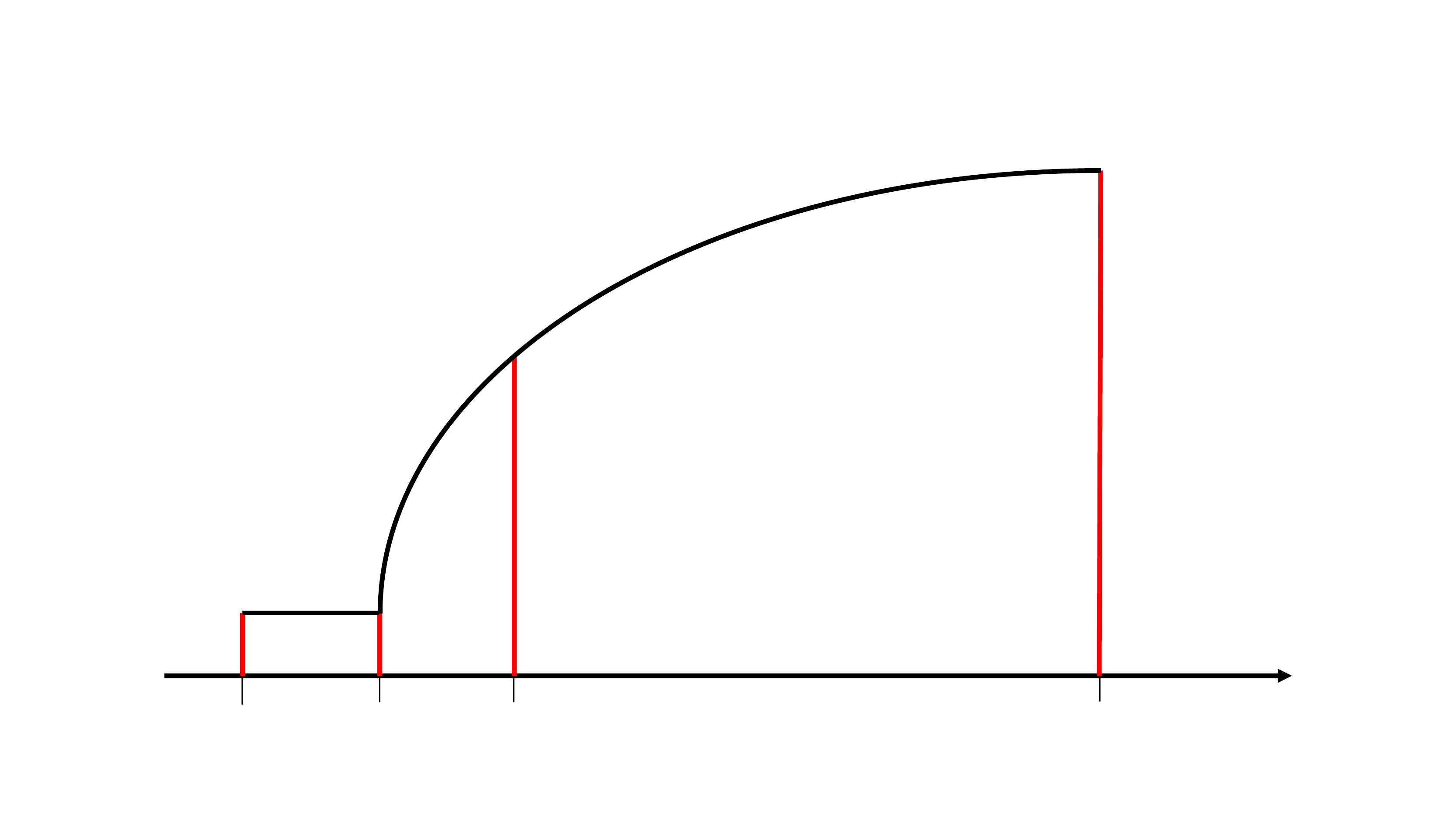}
\put(80,16.5){R\'enyi Order}
\put(81.5,12){$1+s$}
\put(74,5){$\infty$}
\put(50,7){$\ldots$}
\put(55,7){$\ldots$}
\put(45,7){$\ldots$}
\put(60,7){$\ldots$}
\put(34.5,5){$2$}
\put(25,5){$1$}
\put(16,5){$0$}
\put(68,46.8){Exact CI}
\put(10,35){Wyner's CI}
\put(19,34){\vector(1,-3){6.3}}
\put(26,14){\circle*{1.5}}
\put(16.8,10){\circle*{1.5}}
\put(40,45){\vector(1,-1){6}}
\put(30,46){R\'enyi CI}
\put(75.5,44.5){\circle*{1.5}}
\end{overpic}
\caption{A schematic showing that the R\'enyi common information provides a bridge between Wyner's common information and the exact common information}
\label{fig:bridge}
\end{figure}


\section{Preliminary Definitions}

Define $\{0,1\}^* = \bigcup_{n\in\bbN} \{0,1\}^n$ to be the set of all finite-length binary strings. Denote the alphabet of the common random variable $W_n$ as the countable set $\calW_n$. We assume, without loss of generality, that $\calW_n \subset \bbN$.  Recall that a {\em prefix-free code} is a source code in which no codeword is a prefix of another. Consider any prefix-free code $f:\calW_n\to\{0,1\}^*$ which yields the codebook $\calC=\{f(w):w\in\calW_n\}$ whose elements are known as {\em codewords}. Then for each symbol $w\in\calW_n$ and the code $f$, let $\ell_f(w)$ be the {\em length} of the codeword~$f(w)$. 
\begin{example} \label{ex:length}
Let $\calW_n = \{1,2,3,4\}$. Consider the  prefix-free code
\begin{equation}
f(1) = 0,\quad f(2) = 10,\quad f(3) = 110,  \quad\mbox{and}\quad  f(4) = 111.
\end{equation}
This code has corresponding lengths 
\begin{equation}
\ell_f(1) = 1,\quad \ell_f(2) = 2,\quad \ell_f(3) = 3,  \quad\mbox{and}\quad  \ell_f(4)=3.
\end{equation}
\end{example}
\begin{definition} \label{def:length}
The {\em expected codeword length} of a code $f:\calW_n\to\{0,1\}^*$ for compressing the source $W_n\sim P_{W_n}$ is 
\begin{equation}
L_f(W_n) = \bbE[\ell_f(W_n)] =\sum_{w\in\calW_n}P_{W_n}(w)\ell_f(w).
\end{equation}
\end{definition}

\begin{definition} \label{def:vlcode}
An {\em $(n,R)$-variable-length distributed source simulation code} $(P_{W_n},f, P_{X^n | W_n},P_{Y^n | W_n})$ consists of 
\begin{itemize}
\item A distribution $P_{W_n}$ supported on a   countable set $\calW_n\subset\bbN$;
\item A prefix-free source code $f:\calW_n\to\{0,1\}^*$;
\item A pair of random   mappings  called {\em processors}  $P_{X^n | W_n} \in\calP( \calX^n |\calW_n)$ and  $P_{Y^n | W_n} \in\calP( \calY^n |\calW_n)$; 
\end{itemize}
such that the per-symbol expected codeword length 
\begin{equation}
\frac{L_f(W_n )}{n}\le R. \label{eqn:expected_length}
\end{equation}
\end{definition}
As usual, $n$ and $R$ are known as the {\em blocklength} and {\em rate} respectively. Observe that if $f$  in Definition~\ref{def:vlcode} is constrained to output codewords whose lengths do not exceed $nR$ and $W_n$ is constrained to be uniform on $\calW_n$, the expected length constraint in \eqref{eqn:expected_length} is automatically satisfied and the definition reverts to that for a  fixed-length distributed source simulation code (cf.\ Definition~\ref{def:wyner_code}).

Using a variable-length code, we assume that the common random variable $W_n$ is transmitted in an error-free manner to the two processors which then generate the {\em synthesized distribution} 
\begin{equation}
P_{X^nY^n}(x^n,y^n)=\sum_{w\in\calW_n}P_{W_n}(w ) P_{X^n|W_n}(x^n|w)P_{Y^n|W_n}(y^n|w). \label{eqn:syn_dist_exact}
\end{equation}
\begin{definition} \label{def:ECI}
The  {\em exact common information} $T_{\Ex}(\pi_{XY})$ between a pair of random variables $(X,Y)\sim \pi_{XY}$ is the infimum of all rates $R$ such that there exists an $(n,R)$-variable-length  distributed source simulation  code $(P_{W_n},f_n, P_{X^n | W_n},P_{Y^n | W_n})$ satisfying 
\begin{equation}
P_{X^nY^n}=\pi_{XY}^n  \quad\mbox{for some}~n\in\bbN, \label{eqn:exact}
\end{equation}
where $P_{X^nY^n}$ denotes the synthesized distribution in \eqref{eqn:syn_dist_exact}. 
\end{definition}
\begin{remark} Note that since we assume that $f_n$ is a {\em prefix-free} code, we can synthesize target distributions of arbitrarily long lengths by concatenating codewords $f_n(W_n)$ and decoding them {\em uniquely}. 
\end{remark} 
It is well known \cite[Sec.~5.4]{Cov06} that the minimal per-letter expected codeword length $L_{f_n}(W_n)$  for a prefix-free code $f_n$ satisfies 
\begin{equation}
H(W_n)\le L_{f_n}(W_n)< H(W_n)+1. \label{eqn:bounds_length}
\end{equation}
The lower bound follows from Kraft's inequality~\cite{Kraft} while the upper bound follows from Shannon's code assignment $\ell_{f_n}(w) = \left\lceil -\log   P_{W_n}(w)\right\rceil$. The bounds in \eqref{eqn:bounds_length} are colloquially known as {\em Shannon's zero-error compression theorem}.

Define  the {\em common entropy} of the joint source $(X,Y)\sim\pi_{XY}$ as 
\begin{equation}
G(\pi_{XY}):=\min_{P_WP_{X|W}P_{Y|W}:P_{XY}=\pi_{XY}} H(W).\label{eqn:common_ent}
\end{equation} 
It can be shown that $\lim_{n\to\infty} \frac{1}{n} G(\pi_{XY}^n) =\inf_{n\in\bbN}\frac{1}{n} G(\pi_{XY}^n)$ so we can define the  {\em common entropy rate}  of the source $(X, Y)\sim\pi_{XY}$ as
\begin{equation}
\overG(\pi_{XY}):= \lim_{n\to\infty} \frac{G(\pi_{XY}^n)}{n} =\inf_{n\in\bbN}\frac{G(\pi_{XY}^n)}{n}. \label{eqn:common_ent_rate}
\end{equation}
\citet{KLE2014} showed the following proposition. 
\begin{proposition} \label{prop:KLE}
The exact common information 
\begin{align}
T_{\Ex}(\pi_{XY}) = \overG(\pi_{XY}). \label{eqn:multi-letter3}
\end{align}
\end{proposition}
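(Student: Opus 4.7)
My plan is to establish the equality $T_{\Ex}(\pi_{XY})=\overG(\pi_{XY})$ by proving the two inequalities separately, exploiting Shannon's zero-error compression bounds \eqref{eqn:bounds_length} to link the per-letter expected codeword length $L_{f_n}(W_n)/n$ to the entropy $H(W_n)/n$, and then identifying $H(W_n)/n$ with an upper bound on $G(\pi_{XY}^n)/n$. A preliminary step is to verify that the limit defining $\overG(\pi_{XY})$ in \eqref{eqn:common_ent_rate} exists and equals the infimum: given optimizers $W_m^*$ and $W_k^*$ for $G(\pi_{XY}^m)$ and $G(\pi_{XY}^k)$ respectively, taking them to be independent yields a candidate common variable $(W_m^*,W_k^*)$ for $\pi_{XY}^{m+k}$ whose entropy is $G(\pi_{XY}^m)+G(\pi_{XY}^k)$, so $n\mapsto G(\pi_{XY}^n)$ is subadditive and Fekete's lemma applies.

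For the achievability direction $T_{\Ex}(\pi_{XY})\le \overG(\pi_{XY})$, fix any blocklength $m\in\bbN$ and select $(W_m^*, P_{X^m|W_m^*},P_{Y^m|W_m^*})$ achieving the minimum in $G(\pi_{XY}^m)$; such an optimizer exists under the usual compactness/support arguments, producing a common variable with $H(W_m^*)=G(\pi_{XY}^m)$ and marginal $\pi_{XY}^m$. Applying a Shannon--Fano--Elias (or Huffman) prefix-free code $f_m$ to $W_m^*$ yields, by the right inequality in \eqref{eqn:bounds_length}, an expected codeword length $L_{f_m}(W_m^*)<G(\pi_{XY}^m)+1$. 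This furnishes an $(m,R_m)$-variable-length distributed source simulation code in the sense of Definition~\ref{def:vlcode} with $R_m<G(\pi_{XY}^m)/m+1/m$ and $P_{X^mY^m}=\pi_{XY}^m$, so $T_{\Ex}(\pi_{XY})\le G(\pi_{XY}^m)/m+1/m$; letting $m\to\infty$ and using \eqref{eqn:common_ent_rate} gives the bound.

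For the converse direction $T_{\Ex}(\pi_{XY})\ge \overG(\pi_{XY})$, consider any $(n,R)$-variable-length code $(P_{W_n},f_n,P_{X^n|W_n},P_{Y^n|W_n})$ that achieves $P_{X^nY^n}=\pi_{XY}^n$ exactly. By construction the triple $(W_n, X^n, Y^n)\sim P_{W_n}P_{X^n|W_n}P_{Y^n|W_n}$ satisfies the Markov chain $X^n-W_n-Y^n$ and has joint marginal $\pi_{XY}^n$, so $W_n$ is feasible in the minimization \eqref{eqn:common_ent} at blocklength $n$, giving $H(W_n)\ge G(\pi_{XY}^n)$. Combining this with Kraft's inequality (the left inequality in \eqref{eqn:bounds_length}) and the rate constraint \eqref{eqn:expected_length} yields $R\ge L_{f_n}(W_n)/n\ge H(W_n)/n\ge G(\pi_{XY}^n)/n\ge \overG(\pi_{XY})$, where the last step uses that $\overG$ is the infimum over $n$. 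Taking the infimum over all admissible rates completes the converse.

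The proof is largely a clean assembly of Shannon's zero-error compression bounds with the definition of $G$; the only mildly subtle point is the preliminary subadditivity argument justifying \eqref{eqn:common_ent_rate}, which I do not anticipate as a real obstacle but must be in place before the achievability argument is allowed to let $m\to\infty$.
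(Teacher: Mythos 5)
Your proof is correct and takes essentially the same route as the paper's: both directions hinge on Shannon's zero-error compression bounds in \eqref{eqn:bounds_length} to tie the per-symbol expected codeword length to $H(W_n)/n$ and hence to $G(\pi_{XY}^n)/n$, and both then pass to the limit/infimum defining $\overG(\pi_{XY})$. The only differences are cosmetic: you spell out the Fekete subadditivity argument (which the paper asserts) and explicitly note in the converse that any exact variable-length code supplies a feasible $W_n$ in \eqref{eqn:common_ent}, a step the paper leaves implicit.
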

Since the proof, due to \citet{KLE2014}, is brief and insightful, we reproduce it here. 

\begin{proof}
For the achievability, fix any $R>\overG(\pi_{XY})$. From~\eqref{eqn:common_ent_rate}, we see that for sufficiently large $n$, $\overG(\pi_{XY})\ge\frac{1}{n} (G(\pi_{XY}^n)+1)$. By the upper bound in  Shannon's zero-error compression theorem in~\eqref{eqn:bounds_length}, we see that it is possible to exactly generate $(X^n,Y^n)\sim \pi_{XY}^n$ with rate at most $\frac{1}{n}(G(\pi_{XY}^n)+1)$. Hence, $R$ is achievable. 

For the converse part, assume $R$ is achievable. Then there exists a simulation code $(P_{W_n},f_n, P_{X^n | W_n},P_{Y^n | W_n}) $ with large enough blocklength $n$  that exactly generates $(X^n,Y^n)\sim\pi_{XY}^n$. Therefore, by the lower bound in  Shannon's zero-error compression theorem, $R\ge \frac{1}{n} G(\pi_{XY}^n)$ for some $n$. Thus, by \eqref{eqn:common_ent_rate}, $R\ge\overG(\pi_{XY})$.
\end{proof}

In view of Proposition~\ref{prop:KLE},  a variable-length synthesis code can be represented
by the triple $(P_{W_n},P_{X^n | W_n}, P_{Y^n | W_n} )$  and the dependence on the
prefix-free source code $f_n$ can be omitted. 

A particularly important property of the common entropy rate is stated in the following lemma~\cite{KLE2014}.
\begin{lemma} \label{lem:common_ent}
The common entropy rate is an upper bound on Wyner's common information, i.e., 
\begin{equation}
\overG(\pi_{XY})\ge C_{\Wyner}(\pi_{XY}). \label{eqn:common_ent_Wyner}
\end{equation}
\end{lemma}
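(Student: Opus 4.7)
The plan is to show the stronger non-asymptotic statement $\frac{1}{n} G(\pi_{XY}^n) \ge C_{\Wyner}(\pi_{XY})$ for every $n\in\bbN$, from which \eqref{eqn:common_ent_Wyner} follows by sending $n\to\infty$ and using \eqref{eqn:common_ent_rate}. Since $H(W_n) \ge I(X^n Y^n; W_n)$ for any auxiliary $W_n$, it suffices to prove that for any triple $(W_n, X^n, Y^n)$ satisfying $X^n - W_n - Y^n$ and $P_{X^n Y^n} = \pi_{XY}^n$, we have
\begin{equation}
\frac{1}{n} I(X^n Y^n; W_n) \ge C_{\Wyner}(\pi_{XY}).
\end{equation}
This is a standard single-letterization step, but I will want to be careful to construct a single-letter auxiliary random variable $W$ that genuinely lies in the feasible set of Wyner's common information.

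First, I would single-letterize the multi-letter mutual information using the chain rule and the i.i.d.\ structure of the target distribution:
\begin{align}
I(X^n Y^n; W_n) &= \sum_{i=1}^{n} \big[ H(X_i Y_i \mid X^{i-1} Y^{i-1}) - H(X_i Y_i \mid W_n, X^{i-1}, Y^{i-1}) \big] \nn\\
&= \sum_{i=1}^{n} \big[ H(X_i Y_i) - H(X_i Y_i \mid W_n, X^{i-1}, Y^{i-1}) \big] \nn\\
&\ge \sum_{i=1}^{n} \big[ H(X_i Y_i) - H(X_i Y_i \mid W_n) \big] = \sum_{i=1}^{n} I(X_i Y_i; W_n),
\end{align}
where the second line uses $P_{X^n Y^n}=\pi_{XY}^n$ and the inequality uses that conditioning reduces entropy.

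Next, I would introduce a time-sharing random variable $J \sim \mathrm{Unif}[n]$ independent of $(W_n, X^n, Y^n)$ and define $X := X_J$, $Y := Y_J$, and $W := (W_n, J)$. Because $(X^n,Y^n)$ is i.i.d.\ under $\pi_{XY}$, the marginal of $(X,Y)$ is exactly $\pi_{XY}$, and because $X^n - W_n - Y^n$ holds, conditioning on $J = j$ yields $X_j - W_n - Y_j$, so the Markov chain $X - W - Y$ is satisfied. Hence $(W,X,Y)$ is feasible for the minimization defining $C_{\Wyner}(\pi_{XY})$ in \eqref{eqn:CWyner_inf}, and since $X_J Y_J$ is independent of $J$,
\begin{equation}
I(XY; W) = I(X_J Y_J; W_n \mid J) = \frac{1}{n}\sum_{i=1}^{n} I(X_i Y_i; W_n) \le \frac{1}{n} I(X^n Y^n; W_n).
\end{equation}
Combining with $H(W_n) \ge I(X^n Y^n; W_n)$ and taking the infimum over admissible $W_n$ gives $C_{\Wyner}(\pi_{XY}) \le \frac{1}{n} G(\pi_{XY}^n)$; taking $n\to\infty$ yields the claim. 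The only real obstacle is to verify, as above, that the constructed $W = (W_n, J)$ is genuinely feasible for $C_{\Wyner}(\pi_{XY})$, i.e.\ that both the Markov chain and the marginal constraint transfer correctly under the uniform time-sharing; everything else is a routine application of the chain rule.
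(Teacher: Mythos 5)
Your proof is correct and follows essentially the same route as the paper: bound $H(W_n)\ge I(X^nY^n;W_n)$, single-letterize via the chain rule and the i.i.d.\ structure of $\pi_{XY}^n$ (with conditioning reducing entropy), and then recognize that the resulting single-letter triple is feasible for the minimization defining $C_{\Wyner}$. The only cosmetic difference is that the paper observes directly that each $(W_n^*,X_i,Y_i)$ already satisfies the marginal and Markov constraints (so $I(W_n^*;X_iY_i)\ge C_{\Wyner}(\pi_{XY})$ term by term), whereas you package the same observation by introducing the time-sharing variable $J$ and taking $W=(W_n,J)$; both are standard and equivalent here.
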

This result can be shown by first defining $W_n^*$ to be the common random variable achieving the  common entropy of the product source $G(\pi_{XY}^n)$. Then it follows that 
\begin{align}
\overG(\pi_{XY})&=\lim_{n\to\infty}\frac{1}{n}H(W_n^*)\\
&\ge \lim_{n\to\infty}\frac{1}{n}I(W_n^*;X^nY^n)\\
&\ge\lim_{n\to\infty}\frac{1}{n}\sum_{i=1}^n I(W_n^*;X_i Y_i) \label{eqn:XYiid}\\
&\ge \min_{ P_W P_{X|W} P_{Y|W}: P_{XY}=\pi_{XY}}I(W;XY) \\
&=C_{\Wyner}(\pi_{XY}), \label{eqn:exact_ge_wyner}
\end{align}
where~\eqref{eqn:XYiid} follows because the source $\{(X_i,Y_i)\}_{i=1}^\infty$ is memoryless.
The central question of this section is whether the inequality in \eqref{eqn:common_ent_Wyner} is strict for some sources $\pi_{XY}$.  We answer this in the affirmative in Section~\ref{sec:exact_dsbs}.

\section{Equivalence}
We now establish a somewhat  surprising equivalence between the exact and unnormalized
R\'enyi common information of order $\infty$ and characterize them via  
an alternative multi-letter expression. 
This equivalence was noticed by the present authors~\cite{YuTan2020_exact}.

\begin{theorem}[Equivalence of exact and $\infty$-R\'enyi common information]\label{thm:equivalence}
For a source with distribution $\pi_{XY}$ defined on a finite alphabet $\calX\times\calY$,
\begin{equation}
T_{\Ex}(\pi_{XY})=\tilT_{\infty}(\pi_{XY}). \label{eqn:Tequivalence}
\end{equation}
Furthermore, the quantities in \eqref{eqn:Tequivalence} are equal to  
\begin{equation}
\lim_{n\to\infty}\frac{\oGamma(\pi_{XY}^n)}{n}, \label{eqn:multiletter_equiv}
\end{equation}
where $\oGamma$ is the upper pseudo-common information of order $\infty$, i.e., 
\begin{align}
\hspace{-.35in}\oGamma(\pi_{XY}):=\oGamma_\infty (\pi_{XY})& \stackrel{\eqref{eqn:oGamma_infty}}{=} \min_{ \substack{P_W P_{X|W} P_{Y|W} :\\ P_{XY}=\pi_{XY}}}-H(XY|W)\nn\\
&\qquad\quad + \bbE_{P_W}\big[ \rvH_{\infty}(P_{X|W },P_{Y|W }\|\pi_{XY}) \big]  \label{eqn:Gamma_fn}
\end{align}
and $\rvH_{\infty}$ is the maximal cross-entropy defined in \eqref{eqn:maximal_cross_ent0}.
\end{theorem}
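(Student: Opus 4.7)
I would prove the three-way equality by establishing the chain
\begin{equation*}
T_{\Ex}(\pi_{XY})\overset{(\mathrm{A})}{\le}\tilT_\infty(\pi_{XY})\overset{(\mathrm{B})}{\le}\liminf_{n\to\infty}\frac{\oGamma(\pi_{XY}^n)}{n}\le\limsup_{n\to\infty}\frac{\oGamma(\pi_{XY}^n)}{n}\overset{(\mathrm{C})}{\le}T_{\Ex}(\pi_{XY}),
\end{equation*}
which, once all four inequalities are verified, forces the middle limit to exist and to coincide with both $T_{\Ex}$ and $\tilT_\infty$. Inequality (A) is the exactification of an approximate code; (B) is a lift from the one-shot bound of Theorem~\ref{thm:RCI}(c) to the multi-letter formula; and (C) is a converse on the multi-letter formula obtained by feeding an exact code into the $\oGamma$ optimization.

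For (A), starting from a rate-$R$ fixed-length approximate code with $\delta_n:=D_\infty(P_{X^nY^n}\|\pi_{XY}^n)\to 0$, the pointwise bound $P_{X^nY^n}\le 2^{\delta_n}\pi_{XY}^n$ yields the convex decomposition $\pi_{XY}^n=2^{-\delta_n}P_{X^nY^n}+(1-2^{-\delta_n})Q_n$ for some joint distribution $Q_n$. I would then build an exact variable-length code by letting $W_n=(B,W_n^{(\mathrm{a})},W_n^{(\mathrm{b})})$ with shared $B\sim\mathrm{Bern}(2^{-\delta_n})$: on $\{B=1\}$ the processors run the approximate code with common randomness $W_n^{(\mathrm{a})}$; on $\{B=0\}$ they invoke a brute-force distributed code for $Q_n$ at rate $\log(|\calX||\calY|)$. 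The per-letter expected codeword length $2^{-\delta_n}R+(1-2^{-\delta_n})\log(|\calX||\calY|)+O(1/n)$ tends to $R$, proving $T_{\Ex}\le R$. For (B), I would apply the one-shot bound $\tilT_\infty(\pi)\le\oGamma(\pi)$ of Theorem~\ref{thm:RCI}(c) to the super-symbol source $\pi_{XY}^n$ and combine with the scaling identity $\tilT_\infty(\pi^n)=n\tilT_\infty(\pi)$ (proved by blocking codes across super-symbols) to obtain $\tilT_\infty(\pi)\le\oGamma(\pi^n)/n$ for every $n$.

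For (C), given any exact variable-length code with common randomness $W_n$ and per-letter rate $R$, I would use the code's own triple $(P_{W_n},P_{X^n|W_n},P_{Y^n|W_n})$---which satisfies $P_{X^nY^n}=\pi_{XY}^n$ by exactness---as a feasible point for the $\oGamma(\pi_{XY}^n)$ optimization and show that the resulting objective value is at most $nR+o(n)$. The prefix-free source-coding bound $H(W_n)\le L_{f_n}(W_n)\le nR$ handles the entropy portion; for the remainder I would reduce without loss of generality to ``type-coherent'' exact codes---those in which $W_n$ also encodes the joint type of $(X^n,Y^n)$, at an entropy overhead $O(\log n)$ that is absorbed via subadditivity of $G(\pi_{XY}^n)$---and then apply Sanov's theorem (Theorem~\ref{thm:sanov}) together with Lemma~\ref{lem:prop_max_cross_ent} to bound the maximal-cross-entropy term $\bbE_{P_{W_n}}[\rvH_\infty(P_{X^n|W_n},P_{Y^n|W_n}\|\pi_{XY}^n)]-H(X^nY^n|W_n)$ by $nR+o(n)$.

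\textbf{Main obstacle.} The most delicate step is (C), specifically the bound on $\bbE_{P_{W_n}}[\rvH_\infty(P_{X^n|W_n},P_{Y^n|W_n}\|\pi_{XY}^n)]$. Because $\rvH_\infty$ is an $L^\infty$-style supremum over couplings, it is sensitive to a single ``bad'' sequence pair in $\supp(P_{X^n|W_n})\times\supp(P_{Y^n|W_n})$, and an arbitrary exact code offers no a~priori control over worst cases. The reduction to type-coherent codes is essential: once each $w\in\calW_n$ indexes sequences in a fixed joint type class, the worst-case and typical coupling exponents for $\rvH_\infty$ differ by at most $O(\log n)$, which vanishes after normalization by $1/n$.
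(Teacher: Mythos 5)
Your steps (A) and (B) are sound: (A) is exactly the paper's mixture-decomposition argument, and (B) correctly lifts the one-shot bound of Theorem~\ref{thm:RCI}(c) to super-symbols via the scaling identity $\tilT_\infty(\pi_{XY}^n)=n\,\tilT_\infty(\pi_{XY})$. You have correctly identified step (C) as the crux, but the type-coherent reduction you propose does not work. Appending the \emph{joint} type $T_{X^nY^n}$ to $W_n$ destroys the Markov chain $X^n-W_n'-Y^n$: the joint type depends on both $X^n$ and $Y^n$, so conditioning on it re-couples them given $W_n$, and the enlarged $W_n'$ is therefore not even a feasible point for the $\oGamma(\pi_{XY}^n)$ minimization. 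The Markov-preserving alternative of appending only the \emph{marginal} types $(T_{X^n},T_{Y^n})$ is not tight enough. Since any $(x^n,y^n)$ with those marginal types has a joint type lying in $\calC(T_{X^n},T_{Y^n})$, the best one can say is $\rvH_\infty(P_{X^n|W_n'},P_{Y^n|W_n'}\|\pi_{XY}^n)\le n\,\rvH_\infty(T_{X^n},T_{Y^n}\|\pi_{XY})\approx n\,\rvH_\infty(\pi_X,\pi_Y\|\pi_{XY})$; together with $H(X^nY^n|W_n')\le nH(\pi_{XY})$, this bounds the $\oGamma$-objective by roughly $n\,\rvH_\infty(\pi_X,\pi_Y\|\pi_{XY})-H(X^nY^n|W_n')$, and in order to get $\le nR+o(n)$ one would need $\rvH_\infty(\pi_X,\pi_Y\|\pi_{XY})\le R+H(\pi_{XY})$. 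By Lemma~\ref{lem:prop_max_cross_ent}, $\rvH_\infty(\pi_X,\pi_Y\|\pi_{XY})>H(\pi_{XY})$ whenever $\pi_{XY}$ is not a product distribution, and the gap can be arbitrarily larger than $R$ (for the DSBS with small crossover it is $\approx -\log\beta\to\infty$ while $R+H(\pi_{XY})$ stays bounded). The marginal types simply do not constrain which couplings are realizable inside $\supp(P_{X^n|W_n'})\times\supp(P_{Y^n|W_n'})$, which is exactly what $\rvH_\infty$ takes a worst case over.

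The paper sidesteps this issue entirely for the equivalence $T_\Ex=\tilT_\infty$: to show $\tilT_\infty\le T_\Ex$ it takes a $(k,R)$-variable-length exact code, shapes a \emph{uniform} $M_n$ of rate $\approx kR$ into an approximation of $P_{W_k}^n$ within vanishing $D_\infty$ (intrinsic randomness / source resolvability on the weakly typical set), then concatenates with the product test channels $P_{X|W}^n P_{Y|W}^n$ and invokes the data-processing inequality for $D_\infty$---no $\rvH_\infty$ accounting is required. If you prefer to keep your chain and route through the multi-letter $\oGamma$ characterization, step (C) should not be applied to an \emph{arbitrary} exact code; start instead from the specific mixture-decomposition code constructed in (A). On the high-probability branch $\{B=1\}$ the encoder is the uniform-$M_n$ fixed-length approximate code, and Lemma~\ref{lem:lb_renyi} together with the $s\to\infty$ limit of the variational representation~\eqref{eqn:var_Renyi} gives
\begin{equation*}
-H(X^nY^n|M_n)+\bbE_{M_n}\bigl[\rvH_\infty(P_{X^n|M_n},P_{Y^n|M_n}\|\pi_{XY}^n)\bigr]\le nR+D_\infty(P_{X^nY^n}\|\pi_{XY}^n);
\end{equation*}
on the low-probability branch $\{B=0\}$ the contribution is at most $(1-2^{-\delta_n})\,n\log\bigl(1/\min_{(x,y)\in\supp(\pi_{XY})}\pi_{XY}(x,y)\bigr)=o(n)$, where $\delta_n=D_\infty(P_{X^nY^n}\|\pi_{XY}^n)\to 0$. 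This yields $\oGamma(\pi_{XY}^n)\le nR+o(n)$ and legitimately closes the loop.
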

Similarly to the common entropy, the function $\oGamma(\pi_{XY}^n)$ can also be shown to be subadditive; hence, the limit in \eqref{eqn:multiletter_equiv} exists due to Fekete's lemma~\cite{Fekete}.  We also remark that to compute the minimization in \eqref{eqn:Gamma_fn}, we can restrict the cardinality of $W$ to be no more than $|\calX||\calY|$.

Theorem~\ref{thm:equivalence} says that for any joint  source defined on a finite alphabet, the exact common information is equal to the unnormalized R\'enyi common information of order $\infty$. The former is defined in Definition~\ref{def:ECI}. The latter, on the other hand, involves the seemingly stringent condition    $D_\infty ( P_{ X^nY^n} \| \pi_{XY}^n)\to 0$ which is equivalent to
\begin{equation}
\max_{(x^n,y^n)\in \supp(P_{ X^nY^n})}\frac{P_{ X^nY^n}(x^n,y^n)}{\pi_{XY}^n(x^n,y^n)}=1+o(1). \label{eqn:Dinf}
\end{equation}
 This is surprising as two aspects of the definition have changed, yet they serendipitously resulted in common information quantities that coincide. The theorem also presents an alternative multi-letter expression for the exact common information in \eqref{eqn:multiletter_equiv}. This comes about due to the evaluation of $\tilT_{\infty}(\pi_{XY})$ instead of $T_{\Ex}(\pi_{XY})$ and is more useful than the common entropy rate in~\eqref{eqn:common_ent_rate} for the purposes of single-letterization. 
 
\subsection{Sketch of the Proof of $T_{\Ex}(\pi_{XY})=\tilT_{\infty}(\pi_{XY})$ } \label{sec:sketch_equiv}
Because the equality in~\eqref{eqn:Tequivalence} is particularly important, we sketch its proof in this subsection. For the impatient reader, this subsection can be omitted at a first reading.

We first show that $T_{\Ex}(\pi_{XY})\le\tilT_{\infty}(\pi_{XY})$. To so, we let $R$ be achievable rate for a fixed-length distributed source simulation code for which $D_\infty(P_{X^nY^n}\|\pi_{XY}^n)\to 0$. This means that for every $\epsilon>0$, for all sufficiently large $n$,  there exists a fixed-length simulation code $( P_{X^n |M_n},P_{Y^n |M_n})$ with rate $R$ such that  $D_\infty(P_{X^nY^n}\|\pi_{XY}^n)\le\epsilon$, where the synthesized distribution $P_{X^nY^n}$ is defined in \eqref{eqn:syn_dist}.  We show that $R$ is also achievable for exact reconstruction using a variable-length code.  The idea is to consider a ``mixing'' scheme in which  with high probability, we use the given fixed-length code, and with low probability, we use a completely lossless code. 

By the definition of $D_\infty$,  we have $P_{X^nY^n}(x^n,y^n)\le 2^\epsilon \pi_{XY}^n(x^n,y^n)$ for all $(x^n,y^n)\in\calX^n\times\calY^n$.
Define 
\begin{equation}
P_{\hatX^n\hatY^n}(x^n,y^n):=\frac{2^\epsilon\pi_{XY}^n(x^n,y^n)-P_{X^nY^n}(x^n,y^n)}{2^\epsilon-1},
\end{equation}
which is a valid distribution (as it is non-negative and sums to one). 
Note now that    $\pi_{XY}^n$ is a {\em mixture distribution} that can be written as a convex combination  of $P_{X^nY^n}$ and $P_{\hatX^n\hatY^n}$ as follows
\begin{equation}
\pi_{XY}^n(x^n,y^n)=2^{-\epsilon}P_{X^nY^n}(x^n,y^n)+(1-2^{-\epsilon}) P_{\hatX^n\hatY^n}(x^n,y^n). \label{eqn:mix_dist}
\end{equation}
The variable-length code first generates a Bernoulli random variable $U\sim\mathrm{Bern}(2^{-\epsilon})$ which can be described by  $1$ bit. It then  transmits $U$ to the two processors. If $U = 1$,   the encoder also generates $M_n  \sim\mathrm{Unif}[2^{nR}]$ and uses the  given  fixed-length code $( P_{X^n |M_n},P_{Y^n |M_n})$  with rate $R$ to generate $P_{X^nY^n}$. Otherwise (if $U=0$), the encoder generates $(\hatX^n,\hatY^n)\sim P_{\hatX^n\hatY^n}$ using $\log (|\calX||\calY|)$ bits per source symbol. By the law of total probability, the distribution generated is {\em exactly} $\pi_{XY}^n$ in~\eqref{eqn:mix_dist}. Since $\Pr(U=1)=2^{-\epsilon}$,  the average codeword length   required is 
\begin{equation}
\frac{1}{n}+ 2^{-\epsilon} R+(1 - 2^{-\epsilon})\log (|\calX||\calY|). \label{eqn:tot_rate}
\end{equation}
 Taking $n\to\infty$ and then $\epsilon\downarrow 0$ yields the conclusion  that $R$ is an achievable rate for the exact synthesis of $\pi_{XY}^n$.  Because we use a mixture distribution in \eqref{eqn:mix_dist}, this technique is known as the {\em mixture decomposition technique} and will also be used in Section~\ref{ch:ecs}.

We next argue that $\tilT_{\infty}(\pi_{XY})\le T_{\Ex}(\pi_{XY})$. For this purpose, assume that there exists a $(k,R)$-variable-length distributed source simulation code  $(P_{W_k} ,P_{X^k | W_k},P_{Y^k | W_k})$ that exactly generates $\pi_{XY}^k$, i.e., 
\begin{equation}
\pi_{XY}^k(x^k,y^k) =\sum_{w_k} P_{W_k}(w_k)P_{X^k | W_k}(x^k|w_k) P_{Y^k | W_k}(y^k|w_k) . \label{eqn:pi_k}
\end{equation}
For every $\epsilon>0$, there exists $k\in\bbN$ such that $R$ can come arbitrarily close to $\frac{1}{k} H(W_k)$. In particular, we can assume  
\begin{equation}
R\le \frac{H(W_k)}{k}(1+2\epsilon). 
\end{equation}

\begin{figure}
\centering
\setlength{\unitlength}{0.05cm}{ 
\begin{picture}(160,26) 
\put(0,00){\line(1,0){160}} 
\put(0,15){\line(1,0){160}}
\put(0,0){\line(0,1){15}}
\put(160,0){\line(0,1){15}}

\put(20,0){\line(0,1){15}}
\put(40,0){\line(0,1){15}}
\put(140,0){\line(0,1){15}}
\put(120,0){\line(0,1){15}}

\put(0,20){\vector(1,0){160}}

\put(160,20){\vector(-1,0){160}}

\put(9,5.5){$k$}
\put(29,5.5){$k$}
\put(129,5.5){$k$}
\put(149,5.5){$k$}

\put(66,5.5){$\ldots$}
\put(73,5.5){$\ldots$}
\put(80,5.5){$\ldots$}
\put(87,5.5){$\ldots$}
\put(77,23){$nk$}
\end{picture}}
\caption{Code construction for the proof that $\tilT_{\infty}(\pi_{XY})\le T_{\Ex}(\pi_{XY})$.}
\label{fig:supercode}
\end{figure}
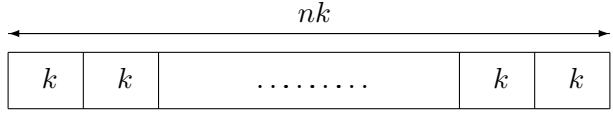

Using the above variable-length code,  we construct a fixed-length super-code which is the concatenation of $n$ independent length-$k$ blocks  with rate 
\begin{equation}
R':=\frac{H(W_k)}{k} (1+\epsilon). \label{eqn:rate_super}
\end{equation}
 See Fig.~\ref{fig:supercode}.  We will now verify that this super-code has the desired property in~\eqref{eqn:Dinf}. The common random variable $W_k^n = (W_{k1},\ldots, W_{kn})\sim P_{W_k}^n$ and $(P_{X_k|W_k}^n,  P_{Y_k|W_k}^n)$ is the pair of processors. The main idea of the proof is to suitably ``shape'' a uniform random variable $M_n$ into the non-uniform $W_k^n$ so that the variable-length code $(P_{W_k}^n, P_{X_k|W_k}^n,  P_{Y_k|W_k}^n)$ can be used subsequently.

We now  design a function $f$ such that given a uniform random variable $M_n\sim\mathrm{Unif}(\calM_n)$ where $\calM_n=[2^{nkR'}]$, the function $f$ applied to $M_n$ simulates $P_{W_k}^n$ in the sense that the R\'enyi divergence of order  $\infty$  from $P_{f(M_n)}$ to $P_{W_k}^n$ vanishes. This function  is constructed as follows~\cite[Theorem~7]{YuTan_sim}. It maps multiple elements of $\calM_n$ to each sequence in the weakly typical set $\calA_\epsilon^{(n)}(P_{W_k})$. For each sequence $w_k^n$, we control the number of elements that are mapped to $w_k^n$ to be directly proportional to $P_{W_k}^n (w_k^n)$.
 By the asymptotic equipartition property~\cite{Cov06}, $W_k^n\sim P_{W_k}^n$ is distributed almost uniformly on $\calA_\epsilon^{(n)}(P_{W_k})$. According to the theory of (R\'enyi) source resolvability~\cite{HV93,Ste96, YuTan_sim}, since $\frac{1}{n}\log|\calM_n|= kR'\ge (1+\epsilon) H(W_k)$ (cf.~\eqref{eqn:rate_super}), 
\begin{equation}
\lim_{n\to\infty}D_\infty\big( P_{f(M_n)}\big\| P_{W_k}^n \big)=0.
\end{equation}
This essentially follows because the $P_{W_k}^n$-probability of the weakly typical set $\calA_\epsilon^{(n)}(P_{W_k})$ converges to  one exponentially fast as $n\to\infty$. 
Now, we consider the concatenation scheme in Fig.~\ref{fig:concat_codes}. From~\eqref{eqn:pi_k} and the constructed fixed-length code,  we have
\begin{align}
P_{W_k}^n &\rightarrow  P_{X^k|W_k}^n P_{Y^k|W_k}^n \rightarrow\pi_{XY}^{kn} \quad\mbox{and}\\*
P_{f(M_n)}&\rightarrow  P_{X^k|W_k}^n P_{Y^k|W_k}^n \rightarrow P_{X^{kn}  Y^{kn}} ,
\end{align}
where $P_X\rightarrow V_{Y|X}\rightarrow P_Y$ means that $P_Y $ is the induced output distribution when the input distribution is $P_X$ and the stochastic kernel (channel) is $V_{Y|X}$. 
Thus, by the data-processing inequality for the R\'enyi divergence,
\begin{equation}
D_\infty\big(P_{X^{kn}Y^{kn}}\big\|\pi_{XY}^{kn}\big)\le D_\infty\big( P_{f(M_n)}\big\| P_{W_k}^n \big)\to0\quad\mbox{as }n\to\infty.
\end{equation}
This concludes the proof that the R\'enyi divergence of order $\infty$ converges to zero along blocklengths  $n$ that are integers multiples of $k$. For other $n$'s, a standard approximation argument suffices. Thus, $R$ is an achievable rate for the   approximate synthesis problem under the R\'enyi divergence of order $\infty$, which in turn implies $R\ge \tilde{T}_\infty (\pi_{XY})$. 

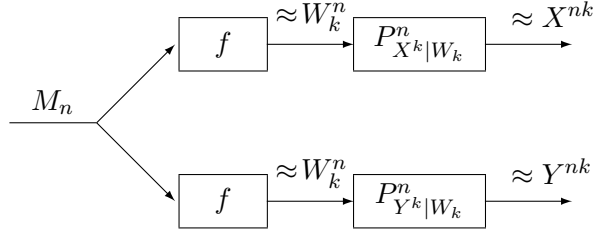
\begin{figure}[t]
\centering \setlength{\unitlength}{0.058cm} { \begin{picture}(140,60)
\put(5,30){\line(1,0){20}} \put(25,30){\vector(1,1){18}}
\put(25,30){\vector(1,-1){18}} \put(44,42){\framebox(20,12){$f$}}
\put(44,6){\framebox(20,12){$f$}} \put(64,48){\vector(1,0){20}}
\put(64,12){\vector(1,0){20}} \put(66,53){%
\mbox{%
$\approx\! W_k^{n}$%
}} \put(66,17){%
\mbox{%
$\approx\! W_k^{n}$%
}} \put(84,42){\framebox(30,12){$P_{X^{k}|W_k}^{n}$}} \put(84,6){\framebox(30,12){$ P_{Y^{k}|W_k}^{n}$}}
\put(114,48){\vector(1,0){20}} \put(114,12){\vector(1,0){20}}
\put(10,33){%
\mbox{%
$M_n$%
}} \put(120,52){%
\mbox{%
$\approx X^{nk}$%
}} \put(120,16){%
\mbox{%
$\approx Y^{nk}$%
}} \end{picture}}
\caption{Concatenation scheme for the proof that $\tilT_{\infty}(\pi_{XY})\le T_{\Ex}(\pi_{XY})$.}
\label{fig:concat_codes} 
\end{figure}

\section{Single-Letter Bounds for Exact Common Information}
In anticipation of evaluating $T_{\Ex}(\pi_{XY})=  \tilT_{\infty}(\pi_{XY})$ for various sources $\pi_{XY}$, we provide single-letter bounds on these common information 
quantities. Recall the definition of the  upper  pseudo-common information of order $\infty$, namely $\oGamma(\pi_{XY}) =\oGamma_\infty(\pi_{XY})$  in~\eqref{eqn:Gamma_fn}. Additionally, we rename the lower pseudo-common information of order $\infty$  as
\begin{align}
\hspace{-.2in}\uGamma(\pi_{XY}) & :=\uGamma_\infty(\pi_{XY}) \nn\\*
\hspace{-.2in} &=  \inf_{ \substack{P_W P_{X|W} P_{Y|W} :\\ P_{XY}=\pi_{XY}}}-H(XY|W)\nn\\
\hspace{-.2in}&\hspace{-.2in}\qquad\qquad+ \inf_{\substack{Q_{WW'}\\ \in \calC(P_W,P_W)}} \bbE_{Q_{WW'}}\big[ \rvH_{\infty}(P_{X|W},P_{Y|W'}\|\pi_{XY})\big], \label{eqn:uGamma_inf}
\end{align}
where the last equality is the same as \eqref{eqn:uGamma_infty}.
Note that the only difference between $\uGamma(\pi_{XY})$ and $\oGamma(\pi_{XY})$ is the inner sum; the former is an optimization over all couplings $Q_{WW'}\in\calC(P_W,P_W)$ while latter replaces this optimization with $P_W$. 


We are now ready to state single-letter bounds on the (unnormalized) R\'enyi common information of order $\infty$ which is the same as the exact common information (cf.\ Theorem~\ref{thm:equivalence}). 
\begin{theorem}[Bounds on exact common information] \label{thm:exact_sl}
For a source with distribution $\pi_{XY}$ defined on a finite alphabet $\calX\times\calY$,
\begin{align}
\max\big\{ \uGamma(\pi_{XY}),C_{\Wyner} (\pi_{XY}) \big\}&\le T_\infty(\pi_{XY})\le\tilT_\infty  (\pi_{XY}) \\
&=T_{\Ex}(\pi_{XY})\le\oGamma(\pi_{XY}). \label{eqn:T_Ex_ach}
\end{align}
\end{theorem}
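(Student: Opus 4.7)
The plan is to assemble Theorem~\ref{thm:exact_sl} from three ingredients that have already been set up in the preceding sections: (i) the bounds in Theorem~\ref{thm:RCI}(c) specialized to order $1+s = \infty$, (ii) the equivalence $\tilT_\infty(\pi_{XY}) = T_{\Ex}(\pi_{XY})$ from Theorem~\ref{thm:equivalence}, and (iii) the limit relations for $\uGamma_{1+s}$ and $\oGamma_{1+s}$ as $s \to \infty$ recorded in Lemma~\ref{lem:prop_Gamma}(d). Essentially, there is no new coding theorem to prove; the task is a clean passage to the limit.

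First I would dispose of the easy links. The trivial chain $T_\infty(\pi_{XY}) \le \tilT_\infty(\pi_{XY})$ follows immediately from Definition~\ref{def:RCI}, since the unnormalized vanishing condition $D_\infty(P_{X^nY^n}\|\pi_{XY}^n) \to 0$ is strictly stronger than its normalized version. The middle equality $\tilT_\infty(\pi_{XY}) = T_{\Ex}(\pi_{XY})$ is exactly Theorem~\ref{thm:equivalence}, whose proof (the mixture decomposition one way, the resolvability-plus-concatenation construction the other) has already been sketched in Section~\ref{sec:sketch_equiv}. For the Wyner lower bound, the monotonicity in~\eqref{eqn:monoT} gives $C_{\Wyner}(\pi_{XY}) = T_1(\pi_{XY}) \le T_{1+s}(\pi_{XY})$ for every $s > 0$; taking $s \to \infty$ yields $C_{\Wyner}(\pi_{XY}) \le T_\infty(\pi_{XY})$.

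Next I would handle the $\uGamma$ lower bound. Theorem~\ref{thm:RCI}(c) gives $\uGamma_{1+s}(\pi_{XY}) \le T_{1+s}(\pi_{XY})$ for every $s \in (0,1] \cup \{\infty\}$, so at $s = \infty$ this reads $\uGamma(\pi_{XY}) = \uGamma_\infty(\pi_{XY}) \le T_\infty(\pi_{XY})$ directly, using the definition of $\uGamma$ in~\eqref{eqn:uGamma_inf}. (Alternatively, one can take $s$ through $(0,1]$ and pass to the limit, invoking Lemma~\ref{lem:prop_Gamma}(d) to conclude $\lim_{s\uparrow\infty}\uGamma_{1+s}(\pi_{XY}) = \uGamma_\infty(\pi_{XY})$ and the monotonicity~\eqref{eqn:monoT} to conclude $\lim_{s\uparrow\infty}T_{1+s}(\pi_{XY}) \le T_\infty(\pi_{XY})$.) For the $\oGamma$ upper bound, Theorem~\ref{thm:RCI}(c) gives $\tilT_\infty(\pi_{XY}) \le \oGamma_\infty(\pi_{XY}) = \oGamma(\pi_{XY})$, and combining this with the equivalence $\tilT_\infty = T_{\Ex}$ completes the chain.

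The main obstacle is really packaging rather than discovery: one must verify that Theorem~\ref{thm:RCI}(c) genuinely applies at the endpoint $s = \infty$. The achievability sketch in Section~\ref{sec:ach_RCIc} invokes the one-shot soft-covering Lemma~\ref{lem:one-shot-sc}, whose stated range is $s \in (0,1]$, so the $s = \infty$ upper bound $\tilT_\infty \le \oGamma$ cannot be obtained by a naive limit from within $(0,1]$. The cleanest route is to bypass this issue via the equivalence: by Theorem~\ref{thm:equivalence}, it suffices to exhibit, for every $P_WP_{X|W}P_{Y|W}$ with $P_{XY} = \pi_{XY}$ and each $\epsilon > 0$, a variable-length exact-synthesis code whose per-symbol expected description length comes within $\epsilon$ of $-H(XY|W) + \bbE_{P_W}[\rvH_\infty(P_{X|W},P_{Y|W}\|\pi_{XY})]$. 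One constructs this code using the truncated product distributions of~\eqref{eqn:trunc1}--\eqref{eqn:trunc3} combined with the mixture decomposition argument of Section~\ref{sec:sketch_equiv}: the truncated code approximates $\pi_{XY}^n$ within a small $D_\infty$ gap, and the mixture decomposition converts this into an exact-synthesis code with an additive overhead of $O(1/n)$, matching $\oGamma(\pi_{XY})$ in the limit. Finally, minimizing over the test distribution $P_WP_{X|W}P_{Y|W}$ yields $T_{\Ex}(\pi_{XY}) \le \oGamma(\pi_{XY})$.
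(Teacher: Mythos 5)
Your proof is correct and matches the paper's own one-line justification: combine Theorem~\ref{thm:RCI}(c) at $s=\infty$ with Theorem~\ref{thm:equivalence}, using $\uGamma=\uGamma_\infty$ and $\oGamma=\oGamma_\infty$. The caution you raise about the endpoint $s=\infty$ is fair in spirit — the achievability sketch in Section~\ref{sec:ach_RCIc} does defer the $s=\infty$ details — but the stated range of Theorem~\ref{thm:RCI}(c) is $(0,1]\cup\{\infty\}$, so the direct citation is already licensed; your fallback construction (truncated product distributions plus the mixture decomposition of Section~\ref{sec:sketch_equiv}) is essentially an unpacking of what underlies that claim rather than a genuinely different route.
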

We note that this theorem is just a combination of Theorem~\ref{thm:RCI}(c) (concerning bounds on the  R\'enyi common information of orders in $ (1,2]\cup\{\infty\}$) and Theorem~\ref{thm:equivalence}.
 
 \subsection{Coding Scheme and Type Overflow Phenomenon} \label{sec:type_over}
We now   comment on the coding scheme used to achieve the upper bound $\tilT_\infty(\pi_{XY})\le\oGamma(\pi_{XY})$ in~\eqref{eqn:T_Ex_ach}. It shares many similarities to the achievability of the R\'enyi common information for orders in $(1,2]$ as outlined in Section~\ref{sec:ach_RCIc}. We use   truncated product distributions. Sequences generated from these distributions are useful in upper bounding $T_\infty$ and Wyner's common information. This is because under both scenarios, $X^n -W_n - Y^n$ forms a Markov chain.  Hence given $W_n = w$, the support of $P_{X^n|W_n}(\cdot|w)P_{Y^n|W_n}(\cdot|w)$ is a {\em product set}, i.e., $\calA \times \calB$ where $\calA\subset\calX^n$ and $\calB\subset\calY^n$. Thus the support of $P_{X^nY^n}$ is the {\em union of product sets}. 
This union consists of not only the jointly typical set $\calT_\epsilon^{(n)}(P_{XY})$ but also {\em other} joint type classes. This is what we term as the {\em type overflow phenomenon}. See Fig.~\ref{fig:type_overflow} for a schematic. Designing a  synthesis code that achieves Wyner's common information (under the relative entropy measure) only requires sequences
in the jointly typical set $\calT_\epsilon^{(n)}(P_{XY})$ to be well-simulated. However, $\infty$-R\'enyi approximate
synthesis requires {\em all} the sequences in the support
of $P_{X^nY^n}$ to be well-simulated; see~\eqref{eqn:Dinf}.  Hence, the type overflow phenomenon does not affect Wyner's synthesis asymptotically, but plays a critical role in determining the optimal rate for $\infty$-R\'enyi approximate synthesis (or equivalently, exact synthesis). Truncated i.i.d.\ coding turns out to be  a convenient approach to control all possible types of the output sequence of a code  to mitigate the effects of type overflow.

\begin{figure}[t]
\centering
\begin{overpic}[width=1.02\textwidth]{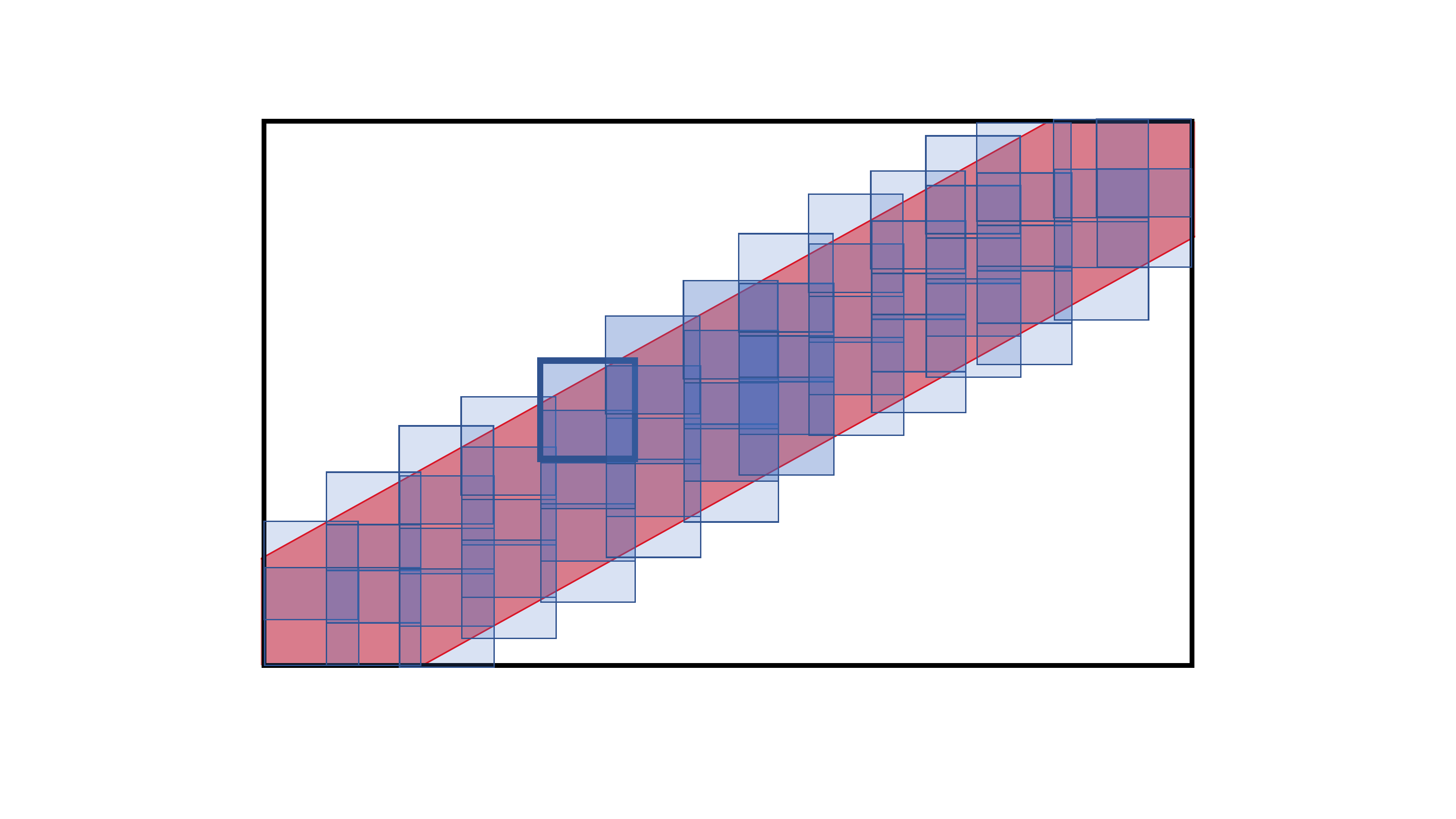}
\put(45,5){$\calT_\epsilon^{(n)}(P_X)$}
\put(4,27){$\calT_\epsilon^{(n)}(P_Y)$}
\put(63,18){$\calT_\epsilon^{(n)}(P_{XY})$}
\put(70,21){\vector(-1,1){13}}
\put(20,51){$\calT_\epsilon^{(n)}(P_{XW}|w^n )\times \calT_\epsilon^{(n)}(P_{YW}|w^n)$}
\put(38,50){\vector(0,-1){18}}
\put(60,14){\vector(-1,1){8}}
\put(50,12){Type overflow}
\end{overpic}
\vspace{-.28in}
\caption{Illustration of the type overflow phenomenon.  The rectangle represents the Cartesian product of the marginal typical sets $\calT_\epsilon^{(n)}(P_X)\times\calT_\epsilon^{(n)}(P_Y)$. The jointly typical set is the  shaded diagonal area $\calT_\epsilon^{(n)}(P_{XY})$. Each small square  is the Cartesian product of conditionally typical sets $\calT_\epsilon^{(n)}(P_{XW}|w^n)\times \calT_\epsilon^{(n)}(P_{YW}|w^n)$ indexed by the codewords $w^n$ in $\calC=\{w^n(m):m\in[2^{nR}]\}$. As can be seem from this schematic, the union of   small squares can be a strict superset of the jointly typical set.  }
\label{fig:type_overflow}
\end{figure}

\subsection{Intuition for  the upper bound (Achievability)} \label{sec:intuition_ub}
Let us provide some intuition for the upper bound in~\eqref{eqn:T_Ex_ach}. Exact synthesis requires that $P_{X^nY^n}$  multiplicatively  approximates $\pi_{XY}^n$ pointwise for all $(x^n,y^n)\in\supp(P_{ X^nY^n})$; see \eqref{eqn:Dinf}. By using the  truncated i.i.d.\ coding technique, we can essentially restrict our attention to random variables $(W^n,X^n)\in\calT_\epsilon^{(n)}(P_{WX})$ and $(W^n,Y^n)\in\calT_\epsilon^{(n)}(P_{WY})$. Let  $M_n \in\calM_n$ be the common randomness   for approximate synthesis based on the R\'enyi divergence of order $\infty$. Then, for sufficiently large~$n$, 
\begin{align}
&\hspace{-.2in}P_{X^nY^n}(x^n,y^n) \nn\\*
&\hspace{-.2in}\quad\approx \frac{1}{|\calM_n|}\sum_{m\in\calM_n} P_{X^n|W^n}(x^n|w^n(m)) P_{Y^n|W^n}(y^n|w^n(m))\\
&\hspace{-.2in}\quad\approx \exp( -nR) N(x^n,y^n)\exp\Big(-n\big(H(X|W)+H(Y|W)\big)\Big), \label{eqn:type_counting}
\end{align}
where $N(x^n,y^n) $ is the number of $w^n(m)$ sequences in the codebook~$\calC$ that are jointly typical with $x^n$ and jointly typical with $y^n$ (individually).  On the other hand, by a similar intuition for the maximal cross-entropy in \eqref{eqn:intuition_maximal}, we have 
\begin{align}
& \min_{(x^n,y^n) \in\supp(P_{X^nY^n})} \pi_{XY}^n\left(  x^n,y^n \right)\nn\\*
&\quad\approx\min_{ \substack{(w^n,x^n,y^n) :\\ T_{w^nx^n}\approx P_{WX}, \, T_{w^ny^n}\approx P_{WY}}}\pi_{XY}^n\left(  x^n,y^n \right)\\*
&\quad\approx \exp\Big(-n \bbE_W\big[\rvH_{\infty} (P_{X|W}, P_{Y|W}\|\pi_{XY} )\big]\Big). \label{eqn:pi_intuition}
\end{align}
Since $N(x^n,y^n)\ge 1$ for $(x^n,y^n)\in\supp(P_{X^nY^n})$  and  $H(X|W)+H(Y|W)=H(XY|W)$, combining~\eqref{eqn:Dinf}, \eqref{eqn:type_counting} and~\eqref{eqn:pi_intuition} yields that  any rate $R$ satisfying 
\begin{equation}
R\ge-H(XY|W)+ \bbE_W\big[\rvH_{\infty} (P_{X|W}, P_{Y|W}\|\pi_{XY} )\big] \label{eqn:lb_R}
\end{equation}
is achievable.
Taking the minimum of the right-hand side over all joint distributions $P_WP_{X|W}P_{Y|W}$ such that $P_{XY}=\pi_{XY}$ and noticing that the resultant expression is  $\oGamma(\pi_{XY})$ (defined in~\eqref{eqn:Gamma_fn})  completes the proof that   $\tilT_\infty(\pi_{XY})\le\oGamma(\pi_{XY})$.

\section{Equality of Exact and Wyner's Common Information} \label{sec:equality}
As we have seen from Section~\ref{sec:renyi_dsbs}, the R\'enyi common information for orders   larger than $1$ can be strictly larger than Wyner's common information. We now discuss various  conditions under which Wyner's common information $C_\Wyner(\pi_{XY})$ is equal to the exact common information $T_{\Ex}(\pi_{XY})$.  Under these conditions, in view of the monotonicity of $\tilT_{1+s}(\pi_{XY})$ for $s\ge-1$, the entire family of R\'enyi common information for all positive orders is equal to $C_\Wyner(\pi_{XY})$. 

\begin{theorem} \label{thm:exact_eq_wyn}
For every Wyner-product distribution  $\pi_{XY} \in\calP(\calX\times\calY)$ (see Definition~\ref{def:pseudo_pdt}), 
\begin{equation}
T_{\Ex}(\pi_{XY})=C_{\Wyner}(\pi_{XY}). \label{eqn:exact_eq_wyn}
\end{equation}
\end{theorem}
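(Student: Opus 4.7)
The plan is to sandwich $T_{\Ex}(\pi_{XY})$ between $C_{\Wyner}(\pi_{XY})$ and $\oGamma(\pi_{XY})$ using Theorem~\ref{thm:exact_sl}, and then squeeze these bounds together for Wyner-product distributions. The lower bound $C_{\Wyner}(\pi_{XY}) \le T_{\Ex}(\pi_{XY})$ is already in hand (either from Theorem~\ref{thm:exact_sl} or, equivalently, from Lemma~\ref{lem:common_ent} combined with Proposition~\ref{prop:KLE}), so the whole task reduces to proving the achievability direction $\oGamma(\pi_{XY}) \le C_{\Wyner}(\pi_{XY})$---that is, the relevant half of Lemma~\ref{lem:prop_Gamma}(e) specialized to $s=\infty$.

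First, I would fix a minimizer $P_W P_{X|W} P_{Y|W}$ of $C_{\Wyner}(\pi_{XY})$ that witnesses the Wyner-product property, set $\calA_w := \supp(P_{X|W=w}) \times \supp(P_{Y|W=w})$, and extract the pointwise factorization
\begin{equation*}
\pi_{XY}(x,y) = \pi_{XY}(\calA_w) \cdot P_{X|W=w}(x) \cdot P_{Y|W=w}(y), \qquad (x,y)\in\calA_w.
\end{equation*}
This follows from \eqref{eqn:wyner_prod} together with the observation that the marginals of $\pi_{XY}(\cdot,\cdot|\calA_w)$ must be compatible with the mixture decomposition $\pi_{XY}=\sum_w P_W(w) P_{X|W=w}P_{Y|W=w}$; the case of overlapping $\calA_w$'s, which is the main obstacle, can be handled by refining $W$ so that the $\calA_w$'s become disjoint, or by a direct induction argument that matches coefficients on each rectangle.

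Given this factorization, any coupling $Q_{XY}\in\calC(P_{X|W=w},P_{Y|W=w})$ is supported in $\calA_w$, so
\begin{equation*}
\sum_{x,y} Q_{XY}(x,y)\log\frac{1}{\pi_{XY}(x,y)} = -\log \pi_{XY}(\calA_w) + H(P_{X|W=w}) + H(P_{Y|W=w}),
\end{equation*}
which is manifestly independent of the coupling. Hence the maximization in \eqref{eqn:maximal_cross_ent0} is vacuous, and invoking the Markov chain $X-W-Y$ to collapse $H(P_{X|W=w})+H(P_{Y|W=w})$ into $H(XY|W=w)$ yields
\begin{equation*}
\rvH_\infty(P_{X|W=w},P_{Y|W=w}\|\pi_{XY}) = -\log\pi_{XY}(\calA_w) + H(XY\mid W=w).
\end{equation*}

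Substituting into \eqref{eqn:Gamma_fn} gives $\oGamma(\pi_{XY})\le \bbE_{P_W}[-\log \pi_{XY}(\calA_W)]$. Finally, I would identify this expectation with $C_{\Wyner}(\pi_{XY})$: using the same factorization, the pointwise log-ratio $\log(P_{XY|W}(x,y|w)/\pi_{XY}(x,y))$ equals $-\log \pi_{XY}(\calA_w)$ on the support of $P_{WXY}$, so taking expectations yields $I(W;XY) = \bbE_{P_W}[-\log\pi_{XY}(\calA_W)] = C_{\Wyner}(\pi_{XY})$ by the choice of the optimizer. Chaining $C_{\Wyner}(\pi_{XY}) \le T_{\Ex}(\pi_{XY}) \le \oGamma(\pi_{XY}) \le C_{\Wyner}(\pi_{XY})$ closes the loop. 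Beyond the structural verification of the factorization on each $\calA_w$, everything else is routine algebraic bookkeeping.
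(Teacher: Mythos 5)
Your overall strategy---reduce to the achievability direction $\oGamma(\pi_{XY}) \le C_{\Wyner}(\pi_{XY})$ via the sandwich in Theorem~\ref{thm:exact_sl}---matches the paper exactly; the proof is indeed Theorem~\ref{thm:equivalence} combined with Lemma~\ref{lem:prop_Gamma}(e), i.e., Lemma~\ref{lem:oGamma_CWy}. However, the crux of your argument hinges on a pointwise factorization $\pi_{XY}(x,y) = \pi_{XY}(\calA_w)\,P_{X|W=w}(x)\,P_{Y|W=w}(y)$ for $(x,y)\in\calA_w$ that you do not establish and that is not an automatic consequence of Definition~\ref{def:pseudo_pdt}(c). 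The Wyner-product property only asserts that $\pi_{XY}(\cdot|\calA_w)$ is \emph{some} product measure on the rectangle $\calA_w$; it does not force its marginals to be $P_{X|W=w}$ and $P_{Y|W=w}$. This identification is automatic when the $\calA_w$'s are pairwise disjoint (there $\pi_{XY}(x,y)=P_W(w)P_{X|W}(x|w)P_{Y|W}(y|w)$ pointwise and $\pi_{XY}(\calA_w)=P_W(w)$), but you flag the overlapping case yourself, and the proposed fixes (``refine $W$'' or ``induction matching coefficients'') are hand-waved: it is not clear that one can refine $W$ into a new optimizer of $C_{\Wyner}$ that both retains the Wyner-product witness property and produces disjoint rectangles. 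As written, both of your key computations---the simplification of $\rvH_\infty$ to $-\log\pi_{XY}(\calA_w)+H(XY\mid W=w)$ and the identification $\bbE[-\log\pi_{XY}(\calA_W)]=I(W;XY)$---lean on that unproved factorization.

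The paper sidesteps this entirely. Once you know $\pi_{XY}$ is a product on $\calA_w$, the objective $\sum_{x,y}Q_{XY}(x,y)\log\tfrac{1}{\pi_{XY}(x,y)}$ is coupling-independent on $\calC(P_{X|W=w},P_{Y|W=w})$ (this is the content of Lemma~\ref{lem:prop_max_cross_ent}(2), and needs nothing about whether the restricted marginals coincide with $P_{X|W=w}$, $P_{Y|W=w}$), so
\begin{align}
\rvH_\infty(P_{X|W=w},P_{Y|W=w}\|\pi_{XY}) = \sum_{x,y}P_{X|W}(x|w)\,P_{Y|W}(y|w)\log\frac{1}{\pi_{XY}(x,y)}.
\end{align}
Averaging over $P_W$ and invoking only the constraint $\sum_w P_W(w)P_{X|W}(x|w)P_{Y|W}(y|w)=\pi_{XY}(x,y)$ immediately collapses this to $H(XY)$, and then $\oGamma \le -H(XY|W)+H(XY) = I(XY;W) = C_\Wyner(\pi_{XY})$. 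Your intermediate object $\bbE[-\log\pi_{XY}(\calA_W)]$ is aesthetically appealing, but it is both unnecessary and the source of the gap: replace the factorization step with the coupling-independence argument and the average-over-$w$ identity, and your proof becomes the paper's.
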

This theorem, due to the present authors~\cite{YuTan2020_exact}, follows easily by combining Lemma~\ref{lem:prop_Gamma}(e) and Theorem~\ref{thm:equivalence}. The former for the case $s=\infty$ is restated here for ease of reference. 
\begin{lemma} \label{lem:oGamma_CWy}
The equality  $\oGamma(\pi_{XY})=C_{\Wyner}(\pi_{XY})$  holds if and only if $\pi_{XY}$ is a  Wyner-product distribution. 
\end{lemma}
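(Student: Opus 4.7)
The plan is to prove both inequalities between $\oGamma(\pi_{XY})$ and $C_{\Wyner}(\pi_{XY})$ through a single chain and then extract the Wyner-product characterisation from its equality locus. The key observation is that for any feasible $P_WP_{X|W}P_{Y|W}$ with $P_{XY}=\pi_{XY}$, the product conditional $P_{X|W=w}P_{Y|W=w}$ is itself an element of $\calC(P_{X|W=w},P_{Y|W=w})$, so $\rvH_\infty(P_{X|W=w},P_{Y|W=w}\|\pi_{XY})\ge \sum_{x,y}P_{X|W}(x|w)P_{Y|W}(y|w)\log\frac{1}{\pi_{XY}(x,y)}$. Averaging over $w\sim P_W$ and using $P_{XY}=\pi_{XY}$ turns the right-hand side into $H(XY)$, so for every feasible triple we get $-H(XY|W)+\bbE_W[\rvH_\infty]\ge -H(XY|W)+H(XY)=I(XY;W)\ge C_{\Wyner}(\pi_{XY})$; minimising over feasible triples yields $\oGamma(\pi_{XY})\ge C_{\Wyner}(\pi_{XY})$, an inequality that is already implicit in Theorem~\ref{thm:exact_sl}. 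The novel content is pinning down precisely when both steps are tight.

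For the $(\Leftarrow)$ direction, suppose $\pi_{XY}$ is Wyner-product and let $P_W^{\star}P_{X|W}^{\star}P_{Y|W}^{\star}$ witness this via Definition~\ref{def:pseudo_pdt}(c), so that $\pi_{XY}$ restricted to $\calA_w=\supp(P_{X|W=w}^{\star})\times\supp(P_{Y|W=w}^{\star})$ factorises as $\alpha_w(x)\beta_w(y)$ for every $w\in\supp(P_W^{\star})$. Then $-\log\pi_{XY}(x,y)=-\log\alpha_w(x)-\log\beta_w(y)$ is sum-separable on $\calA_w$, making $\bbE_Q[-\log\pi_{XY}(X,Y)]$ the \emph{same} scalar for every $Q\in\calC(P_{X|W=w}^{\star},P_{Y|W=w}^{\star})$. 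Hence $\rvH_\infty$ equals its value at the independent coupling, both inequalities in the chain collapse to equalities, and evaluating $\oGamma$ on this triple returns exactly $I(XY;W)=C_{\Wyner}(\pi_{XY})$.

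For the $(\Rightarrow)$ direction, assume $\oGamma(\pi_{XY})=C_{\Wyner}(\pi_{XY})<\infty$ and let $P_W^{\dag}P_{X|W}^{\dag}P_{Y|W}^{\dag}$ attain the minimum defining $\oGamma$; such a minimiser exists via the cardinality bound $|\calW|\le|\calX||\calY|$ of Lemma~\ref{lem:prop_Gamma}(a) and compactness of the feasible polytope. Equality throughout the chain forces (i) this triple to achieve $C_{\Wyner}$, hence to be Wyner-optimal, and (ii) for every $w\in\supp(P_W^{\dag})$, the maximum defining $\rvH_\infty(P_{X|W=w}^{\dag},P_{Y|W=w}^{\dag}\|\pi_{XY})$ is attained at the independent coupling. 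Finiteness of the objective additionally forces $\calA_w\subset\supp(\pi_{XY})$ for such $w$, since otherwise the independent coupling would place positive mass on the zero-set of $\pi_{XY}$ and drive $\oGamma$ to $+\infty$.

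The main obstacle, as I see it, is translating condition (ii) above into the factorisation of $\pi_{XY}$ on each $\calA_w$. The plan is a reduction to $2\times 2$ sub-couplings: for arbitrary $x_1,x_2\in\supp(P_{X|W=w}^{\dag})$ and $y_1,y_2\in\supp(P_{Y|W=w}^{\dag})$, restrict attention to the one-parameter family of couplings on $\{x_1,x_2\}\times\{y_1,y_2\}$ carrying the corresponding two-point conditional marginals. Since all four marginal probabilities are strictly positive, the independent coupling lies strictly interior to this line segment, so optimality of the linear functional $Q\mapsto \bbE_Q[-\log\pi_{XY}]$ at an interior point makes its directional derivative vanish. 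A direct computation identifies this derivative with $\log\frac{\pi_{XY}(x_1,y_1)\pi_{XY}(x_2,y_2)}{\pi_{XY}(x_1,y_2)\pi_{XY}(x_2,y_1)}$, so the $2\times 2$ product identity $\pi_{XY}(x_1,y_1)\pi_{XY}(x_2,y_2)=\pi_{XY}(x_1,y_2)\pi_{XY}(x_2,y_1)$ holds throughout $\calA_w$. A standard logarithmic book-keeping then upgrades these pairwise identities to a global factorisation $\pi_{XY}(x,y)=\alpha_w(x)\beta_w(y)$ on $\calA_w$, certifying $\pi_{XY}$ as Wyner-product with witness $P_W^{\dag}P_{X|W}^{\dag}P_{Y|W}^{\dag}$.
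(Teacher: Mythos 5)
Your proof is correct and its high-level skeleton—chain the bound $\rvH_\infty(P_{X|W=w},P_{Y|W=w}\|\pi_{XY})\ge\bbE_{P_{X|W=w}P_{Y|W=w}}[-\log\pi_{XY}]$, average over $w$, and force equality at every link—is exactly the paper's. What differs is how you handle the equality characterization. The paper delegates this entirely to Lemma~\ref{lem:prop_max_cross_ent}(2), which asserts (without in-text proof) that the maximal cross-entropy equals its independent-coupling value if and only if the restriction of $\pi_{XY}$ is a product; both the ``if'' and the ``only if'' branches of the paper's proof of the present lemma are applications of that statement. You instead prove both branches in-line: the ``if'' via the observation that $-\log\pi_{XY}$ is sum-separable on $\calA_w$, so $\bbE_Q[-\log\pi_{XY}]$ depends only on the marginals; and the ``only if'' via a first-order optimality argument on the coupling polytope, perturbing along $2\times2$ sub-rectangles to extract the ratio identity $\pi_{XY}(x_1,y_1)\pi_{XY}(x_2,y_2)=\pi_{XY}(x_1,y_2)\pi_{XY}(x_2,y_1)$ and then un-logging to a global factorization on $\calA_w$. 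The $2\times2$ perturbation argument is thus, in effect, a clean self-contained proof of the cited lemma's equality condition—somewhat more elementary and transparent than leaving it as a black box, at the modest cost of re-deriving a fact the paper treats as known. (One pedantic caveat worth making explicit: the interiority of the independent coupling that justifies setting the directional derivative to zero requires $\pi_{XY}>0$ on $\calA_w$ so the logarithms are finite, which you do secure via the finiteness observation; and the marginal constraint $P_{XY}=\pi_{XY}$ rules out a zero of the factored $\pi_{XY}$ on $\calA_w$ in the reverse direction as well, so neither branch ever touches $\log 0$.)
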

 Since every pseudo-product distribution is a  Wyner-product distribution (cf.\ Fig.~\ref{fig:products}), the equality in~\eqref{eqn:exact_eq_wyn} also applies to pseudo-product distributions. The fact that pseudo-product distributions result  in the equality $T_{\Ex}(\pi_{XY})=C_{\Wyner}(\pi_{XY})$ was also realized by  \citet{Vellambi2018}, albeit via a different consideration. 

We now provide a brief justification  of Lemma~\ref{lem:oGamma_CWy}.
\begin{proof}[Proof Sketch of Lemma~\ref{lem:oGamma_CWy}]
If $\pi_{XY}$ is a Wyner-product distribution, by the second part of Lemma~\ref{lem:prop_max_cross_ent}, 
\begin{equation}
\rvH_{\infty}(P_{X|W=w},P_{Y|W=w}\|\pi_{XY})=\sum_{x,y}P(x|w)P(y|w)\log\frac{ 1}{\pi(x,y)}, \label{eqn:Hinfty_w}
\end{equation}
where $P_W P_{X|W} P_{Y|W}$ is a joint distribution that attains  the infimum in $C_\Wyner(\pi_{XY})$.
Taking the expectation with respect to $P_W$, and noticing that $P_{XY}=\pi_{XY}$, we obtain
\begin{equation}
\bbE\big[\rvH_{\infty}(P_{X|W},P_{Y|W }\|\pi_{XY}) \big]=H(XY). \label{eqn:Hinf_eq}
\end{equation}
Substituting $P_W P_{X|W} P_{Y|W}$ into the definition of $\oGamma(\pi_{XY})$ in~\eqref{eqn:Gamma_fn}, we obtain $\oGamma(\pi_{XY})\le C_\Wyner(\pi_{XY})$. Obviously (see Theorem~\ref{thm:exact_sl}), the reverse inequality holds and so $\oGamma(\pi_{XY}) = C_\Wyner(\pi_{XY})$.

Now suppose that $\oGamma(\pi_{XY}) = C_\Wyner(\pi_{XY})$. Let $P_W P_{X|W} P_{Y|W}$  attain the infimum in the upper pseudo-common information of order $\infty$, namely $\oGamma(\pi_{XY})$. Then for every $w\in\supp(P_W)$, $\supp(P_{X|W=w})\times \supp(P_{Y|W=w})\subset\supp(\pi_{XY})$. Otherwise, $\bbE [\rvH_{\infty}(P_{X|W },P_{Y|W }\|\pi_{XY}) ]=\infty$, contradicting the optimality of $P_W P_{X|W} P_{Y|W}$. At the same time,
\begin{align}
\uGamma(\pi_{XY}) &= -H(XY|W)+ \bbE\big[\rvH_{\infty}(P_{X|W},P_{Y|W}\|\pi_{XY}) \big] \label{eqn:uGamma_bd1} \\
&\ge -H(XY|W)+H(XY)\ge C_{\Wyner}(\pi_{XY}), \label{eqn:uGamma_bd2}
\end{align}
where the first inequality follows from \eqref{eqn:prop_max_cross_ent2}. Thus, all inequalities above  are equalities. In particular,  $P_W P_{X|W} P_{Y|W}$ also attains the infimum in $C_{\Wyner}(\pi_{XY})$ and~\eqref{eqn:Hinf_eq} holds. This implies that \eqref{eqn:Hinfty_w} holds for all $w\in\supp(P_W)$. By the second part of Lemma~\ref{lem:prop_max_cross_ent}, for all $w\in\supp(P_W)$, $\pi_{XY}$ is a product distribution on $\calA_w = \supp(P_{X|W=w})\times\supp( P_{Y|W=w})$. Hence $\pi_{XY}$ is a Wyner-product distribution.
\end{proof} 

 We now state a couple of other easy-to-verify sufficient conditions for Wyner's common information to be equal to the exact common information. These conditions are  due to \citet{Vellambi}. 

\begin{corollary} \label{cor:equiv}
Let $\pi_{XY}$ be a distribution defined on a finite alphabet. Let $P_WP_{X|W}P_{Y|W}$ achieve the infimum in $C_{\Wyner}(\pi_{XY})$. If either 
\begin{align}
H(W|XY) &=0 \quad\mbox{or}\label{eqn:WgivenXY}\\
\sum_{w\in\calW}H(X|W=w)H(Y|W=w)&=0, \label{eqn:XYgivenW}
\end{align}
then $T_{\Ex}(\pi_{XY})=C_{\Wyner}(\pi_{XY})$.
\end{corollary}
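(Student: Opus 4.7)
My plan is to reduce both parts of the corollary to Theorem \ref{thm:exact_eq_wyn} by verifying that, under either hypothesis, $\pi_{XY}$ is a Wyner-product distribution with respect to the optimal Markov triple $P_W P_{X|W} P_{Y|W}$ for $C_{\Wyner}(\pi_{XY})$ guaranteed in the hypothesis. Writing $\calA_w := \supp(P_{X|W=w}) \times \supp(P_{Y|W=w})$ as in Definition \ref{def:pseudo_pdt}(c), I will check for every $w \in \supp(P_W)$ that the normalized restriction of $\pi_{XY}$ to $\calA_w$ (defined in \eqref{eqn:wyner_prod}) is a product distribution in $(x,y)$.

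For condition \eqref{eqn:WgivenXY}, the equality $H(W|XY) = 0$ means $W = \phi(X,Y)$ almost surely for some deterministic $\phi$. Combined with the Markov chain $X-W-Y$, one has $P_{XY|W=w}(x,y) = P_{X|W=w}(x) P_{Y|W=w}(y)$, and I would first use this to conclude that $\calA_w \subset \supp(\pi_{XY})$: for any $(x,y) \in \calA_w$ both conditionals are positive, hence $\pi_{XY}(x,y) \ge P_W(w) P_{X|W}(x|w) P_{Y|W}(y|w) > 0$. The determinism of $\phi$ then forces every other summand $P_W(w') P_{X|W}(x|w') P_{Y|W}(y|w')$ in the mixture decomposition of $\pi_{XY}(x,y)$ to vanish whenever $(x,y) \in \calA_w$, so $\pi_{XY}(x,y) = P_W(w) P_{X|W}(x|w) P_{Y|W}(y|w)$ on $\calA_w$, which factorizes in $(x,y)$ up to normalization.

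For condition \eqref{eqn:XYgivenW}, the sum $\sum_w H(X|W=w) H(Y|W=w) = 0$ forces, for every $w \in \supp(P_W)$, that either $P_{X|W=w}$ or $P_{Y|W=w}$ is a Dirac mass. Hence $\calA_w$ degenerates to a one-dimensional slice of the form $\{x_w\} \times \supp(P_{Y|W=w})$ or $\supp(P_{X|W=w}) \times \{y_w\}$, and any probability distribution on such a slice is trivially a product, since one of its marginals is a point mass.

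With $\pi_{XY}$ established as a Wyner-product distribution under either hypothesis, Theorem \ref{thm:exact_eq_wyn} supplies $T_{\Ex}(\pi_{XY}) = C_{\Wyner}(\pi_{XY})$ immediately. The only delicate step is the containment $\calA_w \subset \supp(\pi_{XY})$ needed under \eqref{eqn:WgivenXY}: without it, points of $\calA_w$ where $\pi_{XY}$ vanishes could a priori spoil the product structure demanded by Definition \ref{def:pseudo_pdt}(c). Resolving this cleanly uses both the Markov chain and the determinism of $W$ given $(X,Y)$ in an essential way, after which everything else is a one-line verification.
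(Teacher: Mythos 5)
Your proof is correct and takes essentially the same route as the paper: both reduce Corollary~\ref{cor:equiv} to Theorem~\ref{thm:exact_eq_wyn} by verifying the Wyner-product property directly from Definition~\ref{def:pseudo_pdt}(c), using determinism of $W$ given $(X,Y)$ plus the Markov chain $X-W-Y$ for the first condition and the Dirac-mass observation for the second. Your explicit handling of the containment $\calA_w \subset \supp(\pi_{XY})$ is a welcome bit of bookkeeping that the paper's sketch treats implicitly, but it does not constitute a different argument.
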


If either of these conditions hold, it is easy to see that $\pi_{XY}$ is a Wyner-product distribution; thus Theorem~\ref{thm:exact_eq_wyn} generalizes these sufficient conditions.  Indeed, if $H(W|XY)=0$ (i.e.,~\eqref{eqn:WgivenXY} holds), $\calX\times\calY$ can be    partitioned into a collection of subsets $\{\calA_w :w\in\calW\}$. For each $w$, $P_{XY|W=w}$ is the restriction of $\pi_{XY}$ to $\calA_w$, defined in \eqref{eqn:wyner_prod}. Since by assumption, $X-W-Y$ holds, we have $P_{XY|W=w}=P_{X|W=w}P_{Y|W=w}$. This implies the restriction of $ \pi_{XY} $ to each $\calA_w$ can be written as a product distribution, i.e., $\pi_{XY}$ is a Wyner-product distribution.

\begin{figure}[t]
\vspace{-.1in}
\centering
\begin{overpic}[width=1.02\textwidth]{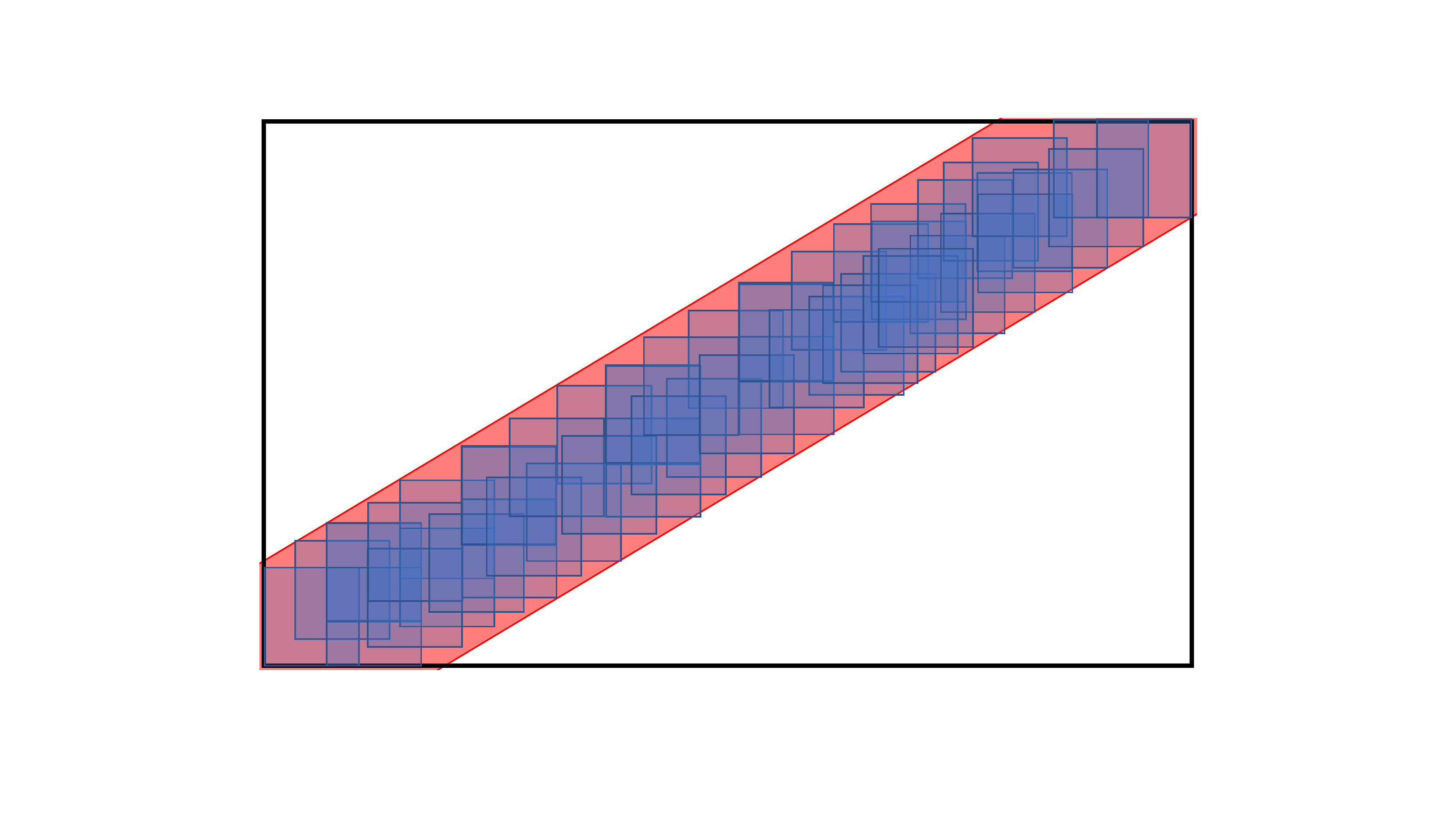}
\put(45,5){$\calT_\epsilon^{(n)}(P_X)$}
\put(4,27){$\calT_\epsilon^{(n)}(P_Y)$}
\put(63,18){$\calT_\epsilon^{(n)}(P_{XY})$}
\put(70,21){\vector(-1,1){8}}
\put(20,51){$\calT_\epsilon^{(n)}(P_{XW}|w^n )\times \calT_\epsilon^{(n)}(P_{YW}|w^n)$}
\put(39.5,50){\vector(0,-1){22}}
\end{overpic}
\vspace{-.28in}
\caption{Illustration of no type overflow. If the condition in~\eqref{eqn:XYgivenW} holds, there is no type overflow. Indeed, the jointly typical set $\calT_\epsilon^{(n)}(P_{XY})$ is approximately the union of the Cartesian product of conditional typical sets as written in \eqref{eqn:union_Cart}. }
\label{fig:type_non_overflow}
\end{figure}

On the other hand, if~\eqref{eqn:XYgivenW} holds, either the support of $P_{X|W=w}$ or the support of $P_{Y|W=w}$ (or both) is a singleton.   Hence,  the restriction of any joint distribution to $\supp(P_{X|W=w})\times  \supp(P_{Y|W=w})$ can be written as $P_X(x)\mathbbm{1}\{y=y_0\}$ or $P_Y(y)\mathbbm{1}\{x=x_0\}$ for some $(P_X , P_Y)\in\calP(\calX)\times\calP(\calY)$ and some $(x_0,y_0)\in\calX\times\calY$, i.e., $\pi_{XY}$ is a Wyner-product distribution.   Another way of seeing this, and as illustrated in Fig.~\ref{fig:type_non_overflow}, is that if $H(X|W=w)H(Y|W=w)=0$ for each $w$, then the coupling set $\calC(P_{X|W},P_{Y|W})$  is a singleton consisting solely of the distribution $P_{X|W}P_{Y|W}$. In other words, the jointly typical set   is approximately the union of   Cartesian products of conditionally typical sets, i.e.,  
\begin{equation}
\calT_\epsilon^{(n)}(P_{XY})\approx\bigcup_{ w^n \in \calC} \big( \calT_\epsilon^{(n)}(P_{XW}|w^n )\times \calT_\epsilon^{(n)}(P_{YW}|w^n )\big). \label{eqn:union_Cart}
\end{equation}
Hence, the jointly typical set  $\calT_\epsilon^{(n)}(P_{XY})$ is approximately $\supp(P_{X^nY^n})$, nullifying the type overflow phenomenon as discussed in Section~\ref{sec:type_over}. Thus, the equality $T_{\Ex}(\pi_{XY})=C_{\Wyner}(\pi_{XY})$ holds. 

\enlargethispage{\baselineskip}
In the remaining sections, we turn to  examples to illustrate the exact common information for various joint sources.

\section{Symmetric Binary Erasure Sources} \label{sec:exact_dses}

Recall the SBES introduced in Section~\ref{sec:sbes}. For this source in which its Wyner's common information is stated in Proposition~\ref{prop:sbes}, observe the following important feature from Fig.~\ref{fig:sbes}. 
If $W=0$, then we know for sure that $X=0$. Similarly if $W=1$, we also know  that $X=1$. The final possibility is that $W=\rme$, in which case $Y=\rme$. That is to say, for all $w\in \calW=\{0,1,\rme\}$, either $H(X|W=w)=0$ or $H(Y|W=w)=0$ (indicated by the red arrows in Fig.~\ref{fig:sbes}). Thus, by the sufficient condition in~\eqref{eqn:XYgivenW} in Corollary~\ref{cor:equiv}, we know that $T_{\Ex}(\pi_{XY})=C_{\Wyner}(\pi_{XY})$. This is summarized in the following proposition, which was originally proved from first principles (i.e., without using Corollary~\ref{cor:equiv}) by \citet{KLE2014}.
\begin{proposition}
The exact common information for the SBES with erasure probability $p$ is 
\begin{equation}
T_{\Ex}(\pi_{XY}) = C_\Wyner(\pi_{XY}) = \left\{ \begin{array}{cc}
1 & p \le 0.5\\
 h(p) & p>0.5
\end{array} \right. .
\end{equation}
\end{proposition}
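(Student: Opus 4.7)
The plan is to invoke Corollary~\ref{cor:equiv}, specifically the sufficient condition~\eqref{eqn:XYgivenW}, applied to the optimal Wyner test channel for the SBES that was identified in Section~\ref{sec:sbes}. Since $C_\Wyner(\pi_{XY})$ for the SBES has already been computed in Proposition~\ref{prop:sbes}, it remains only to argue $T_{\Ex}(\pi_{XY})=C_\Wyner(\pi_{XY})$; the displayed formula then follows immediately.

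First, I would recall the optimal Markov decomposition $P_W P_{X|W} P_{Y|W}$ attaining $C_\Wyner(\pi_{XY})$, as depicted on the right of Fig.~\ref{fig:sbes}: $W$ takes values in $\{0,\rme,1\}$, with the channel from $X$ to $W$ a BEC of erasure probability $p_1$ and the channel from $W$ to $Y$ a BEC-like channel with erasure probability $p_2$ on $\{0,1\}$ that transmits $\rme$ noiselessly, where $p_1$ and $p_2$ satisfy $(1-p_1)(1-p_2)=1-p$. I would then check the per-letter conditional entropies: for $w\in\{0,1\}$, conditioning on $W=w$ forces $X=w$ deterministically, so $H(X|W=w)=0$; for $w=\rme$, conditioning on $W=\rme$ forces $Y=\rme$ deterministically, so $H(Y|W=\rme)=0$. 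Consequently,
\begin{equation}
\sum_{w\in\calW} H(X|W=w)\, H(Y|W=w) = 0,
\end{equation}
which is exactly hypothesis~\eqref{eqn:XYgivenW} of Corollary~\ref{cor:equiv}.

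By Corollary~\ref{cor:equiv}, this immediately yields $T_{\Ex}(\pi_{XY})=C_\Wyner(\pi_{XY})$. Combined with the closed-form value from Proposition~\ref{prop:sbes}, we conclude that $T_{\Ex}(\pi_{XY})$ equals $1$ for $p\le 1/2$ and $h(p)$ for $p>1/2$. Geometrically, the reason this works is the one highlighted in Section~\ref{sec:equality} and Fig.~\ref{fig:type_non_overflow}: because each conditional product $P_{X|W=w}P_{Y|W=w}$ is a point mass in at least one coordinate, the coupling set $\calC(P_{X|W=w},P_{Y|W=w})$ is a singleton, and so there is no type overflow---the support of the synthesized $P_{X^nY^n}$ under the Wyner-optimal scheme already lies inside $\supp(\pi_{XY}^n)$ with the correct per-symbol frequencies, so Wyner's scheme can be upgraded to exact synthesis without any rate penalty.

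The main (minor) obstacle is purely bookkeeping: one must verify that the decomposition from the right of Fig.~\ref{fig:sbes} is indeed an optimizer of $C_\Wyner(\pi_{XY})$ in the regime $p\le 1/2$ as well (where one typically sets $p_1=1$, collapsing the intermediate $\{0,1\}$-part of $W$), and confirm that in that degenerate case the sufficient condition still holds---which it does trivially, since $W$ effectively becomes a binary variable equal to $X$. No new estimates are required; the result is a direct corollary of results already available in the excerpt.
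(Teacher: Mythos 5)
Your proof is correct and takes essentially the same route as the paper: both invoke condition~\eqref{eqn:XYgivenW} of Corollary~\ref{cor:equiv} on the Wyner-optimal test channel of Fig.~\ref{fig:sbes}, verifying that for each $w\in\{0,1,\rme\}$ either $H(X|W=w)=0$ or $H(Y|W=w)=0$, and then plugging in Proposition~\ref{prop:sbes}. Your extra remark about the degenerate $p\le 1/2$ case and the type-overflow intuition are welcome but not present in the paper's own (briefer) argument.
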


This function is illustrated in Fig. \ref{fig:eci_sbes}.

\section{Doubly Symmetric Binary Sources}\label{sec:exact_dsbs}
As we have just seen, in the case of the SBES, the exact common information can be computed in closed form and is equal to Wyner's common information. This begs the following two questions. Are there any other sources for which the exact common information $T_\Ex(\pi_{XY})$ can be computed in closed form?  From what we have gathered up to this point, in general, $T_\Ex(\pi_{XY})$ can only be expressed via a {\em multi-letter} form (in Proposition~\ref{prop:KLE}) or via single-letter {\em bounds}  (in Theorem~\ref{thm:exact_sl}). In addition, are there sources for which the exact common information  is strictly larger than Wyner's common information? The latter is the content of an open question posed by \citet{KLE2014}.

In this section, we consider the DSBS with crossover probability $p \in (0,1/2)$ as described in Section~\ref{sec:dsbs}. Surprisingly, the exact common information can also be evaluated in closed form. Recall  from Section~\ref{sec:dsbs} that  $a=a(p) \in (0,1/2)$ is defined as the unique number satisfying $a\ast a=p$.
\begin{proposition}\label{prop:dsbs_exact}
The exact common information of  the DSBS   with crossover probability $p$ is 
\begin{align}
T_{\Ex}(\pi_{XY}) =  -2h(a) - (1\!-\! 2a)\log\left(\frac{1}{2}  \big(a^2\!+\!\bara^2\big)\right) -2a \log\big( a\bara\big).
\end{align}
\end{proposition}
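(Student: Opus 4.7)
The plan is to chain together the equivalence in Theorem~\ref{thm:equivalence} and the single-letter sandwich in Theorem~\ref{thm:exact_sl}, so that
\begin{equation}
\uGamma(\pi_{XY}) \;\le\; T_\Ex(\pi_{XY}) \;=\; \tilT_\infty(\pi_{XY}) \;\le\; \oGamma(\pi_{XY}),
\end{equation}
and then to show that both $\uGamma(\pi_{XY})$ and $\oGamma(\pi_{XY})$ evaluate to the claimed closed form for the DSBS. This reduces the problem to computing two single-letter quantities at a well-chosen test channel and then arguing optimality.

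For the upper bound I would instantiate the canonical Wyner decomposition of the DSBS (right panel of Fig.~\ref{fig:dsbs}): take $W\sim\mathrm{Bern}(1/2)$ and, conditional on $W=w$, let $X=w\oplus A$ and $Y=w\oplus B$ with $A,B\sim\mathrm{Bern}(a)$ independent, which ensures $P_{XY}=\pi_{XY}$ and $H(XY\mid W)=2h(a)$. For each $w\in\{0,1\}$ the two conditional marginals $P_{X|W=w}$ and $P_{Y|W=w}$ coincide (both equal $\mathrm{Bern}(a)$ when $w=0$ and $\mathrm{Bern}(\bara)$ when $w=1$), so Example~\ref{ex:dsbs_renyi} yields
\begin{equation}
\rvH_\infty\bigl(P_{X|W=w},P_{Y|W=w}\,\bigm\|\,\pi_{XY}\bigr) \;=\; \log\frac{1}{\alpha} + 2a\log\frac{\alpha}{\beta}
\end{equation}
independently of $w$. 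Substituting into \eqref{eqn:Gamma_fn} and using the two identities $a\bara=p/2=\beta$ and $a^2+\bara^2=1-2a\bara=1-p=2\alpha$ (both consequences of $a\ast a=p$), the resulting bound rewrites as the claimed expression $-2h(a)-(1-2a)\log(\tfrac{1}{2}(a^2+\bara^2))-2a\log(a\bara)$.

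For the matching lower bound, I would first evaluate $\uGamma(\pi_{XY})$ at the same Wyner test channel. A second application of Example~\ref{ex:dsbs_renyi} with mismatched marginals $\mathrm{Bern}(a)$ and $\mathrm{Bern}(\bara)$ gives $\rvH_\infty(P_{X|W=w},P_{Y|W=w'}\|\pi_{XY})=\log(1/\beta)$ for $w\ne w'$, and the strict inequality $(1-2a)\log(\alpha/\beta)>0$ (valid since $\alpha>\beta$ and $a<1/2$) shows this value strictly exceeds the matched-$w$ value. Hence the inner infimum over $Q_{WW'}\in\calC(P_W,P_W)$ in \eqref{eqn:uGamma_inf} is attained by the equality coupling $Q_{WW'}(w,w')=\tfrac{1}{2}\mathbbm{1}\{w=w'\}$, and $\uGamma$ matches $\oGamma$ at this test channel.

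The hard part, and the main obstacle, is to establish that this Wyner test channel globally minimizes the objective defining $\uGamma(\pi_{XY})$, since {\em a priori} some other decomposition $P_W P_{X|W}P_{Y|W}$ with $P_{XY}=\pi_{XY}$ could yield a smaller value. My plan is to exploit the two-fold symmetry of the DSBS (invariance under swapping $X\leftrightarrow Y$ and under simultaneous bit-flip $(X,Y)\mapsto(\bar X,\bar Y)$) to argue via a symmetrization/convex-hull step that it suffices to minimize over test channels invariant under both symmetries, and then to parameterize the remaining family by a single scalar (essentially $a$) and conclude by direct calculus. A cleaner but less direct alternative is to appeal to the multi-letter characterization $T_\Ex(\pi_{XY})=\lim_n\oGamma(\pi_{XY}^n)/n$ in Theorem~\ref{thm:equivalence} together with a tensorization estimate for the maximal cross-entropy, reducing the $n$-letter infimum back to the same single-letter formula. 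I expect the global-optimality step to be the delicate technical point; everything else is a routine calculation once the Wyner test channel and equality coupling are identified.
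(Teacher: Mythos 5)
Your upper-bound computation is correct and matches the paper: evaluating $\oGamma$ at the Wyner test channel with Example~\ref{ex:dsbs_renyi} gives exactly $-2h(a)+\log(1/\alpha)+2a\log(\alpha/\beta)$, which rewrites as the claimed form after substituting $\alpha=\tfrac{1}{2}(a^2+\bara^2)$ and $\beta=a\bara$. Your verification that the equality coupling is the optimal $Q_{WW'}$ \emph{for that specific test channel} is also correct (the mismatched value $\log(1/\beta)$ exceeds the matched one by $(1-2a)\log(\alpha/\beta)>0$).

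The gap is in your lower bound, and it is not a minor one. Since $\uGamma$ is defined as an \emph{infimum} over test channels $P_W P_{X|W}P_{Y|W}$, your computation at the Wyner test channel only yields $\uGamma(\pi_{XY})\le c$; it does nothing toward the needed direction $\uGamma(\pi_{XY})\ge c$. Neither of your two proposed repairs closes this. The tensorization alternative is circular: by Fekete's lemma $T_\Ex=\inf_n\oGamma(\pi_{XY}^n)/n$, so a lower bound on $T_\Ex$ requires $\oGamma(\pi_{XY}^n)/n\ge c$ for every $n$, which is exactly the $n$-letter converse you were trying to avoid and is strictly harder than the single-letter one. The symmetrization alternative needs (i) a convexity argument for the $\uGamma$ objective — which contains an inner infimum over couplings and is not obviously jointly convex in the parametrization — and (ii) a cardinality-reduction argument showing symmetric $W$ can be taken binary; neither is spelled out. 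As stated, the lower bound is unproven.

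The paper's lower-bound argument takes a genuinely different route that avoids the global-optimality question altogether. Parametrizing $\alpha_w=\Pr(X=0|W=w)$, $\beta_{w'}=\Pr(Y=0|W=w')$, Example~\ref{ex:dsbs_renyi} gives
\begin{equation}
\rvH_\infty\big(P_{X|W=w},P_{Y|W=w'}\,\big\|\,\pi_{XY}\big)=\log\tfrac{1}{\alpha}+\Big(\min\{\alpha_w,\overline{\beta_{w'}}\}+\min\{\overline{\alpha_w},\beta_{w'}\}\Big)\log\tfrac{\alpha}{\beta},
\end{equation}
and the pointwise inequality
\begin{equation}
\min\{\alpha_w,\overline{\beta_{w'}}\}+\min\{\overline{\alpha_w},\beta_{w'}\}\ \ge\ \min\{\alpha_w,\overline{\alpha_w}\}+\min\{\beta_{w'},\overline{\beta_{w'}}\}
\end{equation}
\emph{decouples} $w$ from $w'$. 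After taking the expectation under any coupling $Q_{WW'}\in\calC(P_W,P_W)$ the right-hand side depends only on the common marginal $P_W$, so the inner infimum over couplings disappears for free. The remaining minimization over $P_WP_{X|W}P_{Y|W}$ with $P_{XY}=\pi_{XY}$ is then handled by the same concavity arguments Wyner used for $C_\Wyner$ of the DSBS (concavity of $h$ and of $\sqrt{\cdot}$, together with the constraints $\bbE[\alpha_W]=\bbE[\beta_W]=\tfrac12$ and $\bbE[\alpha_W\beta_W]=\alpha$). This is what you are missing: a device that reduces the coupled, arbitrary-test-channel infimum to a tractable single-variable problem without first needing to know the optimizer.
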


\begin{figure}
\centering
\includegraphics[width = .8\columnwidth]{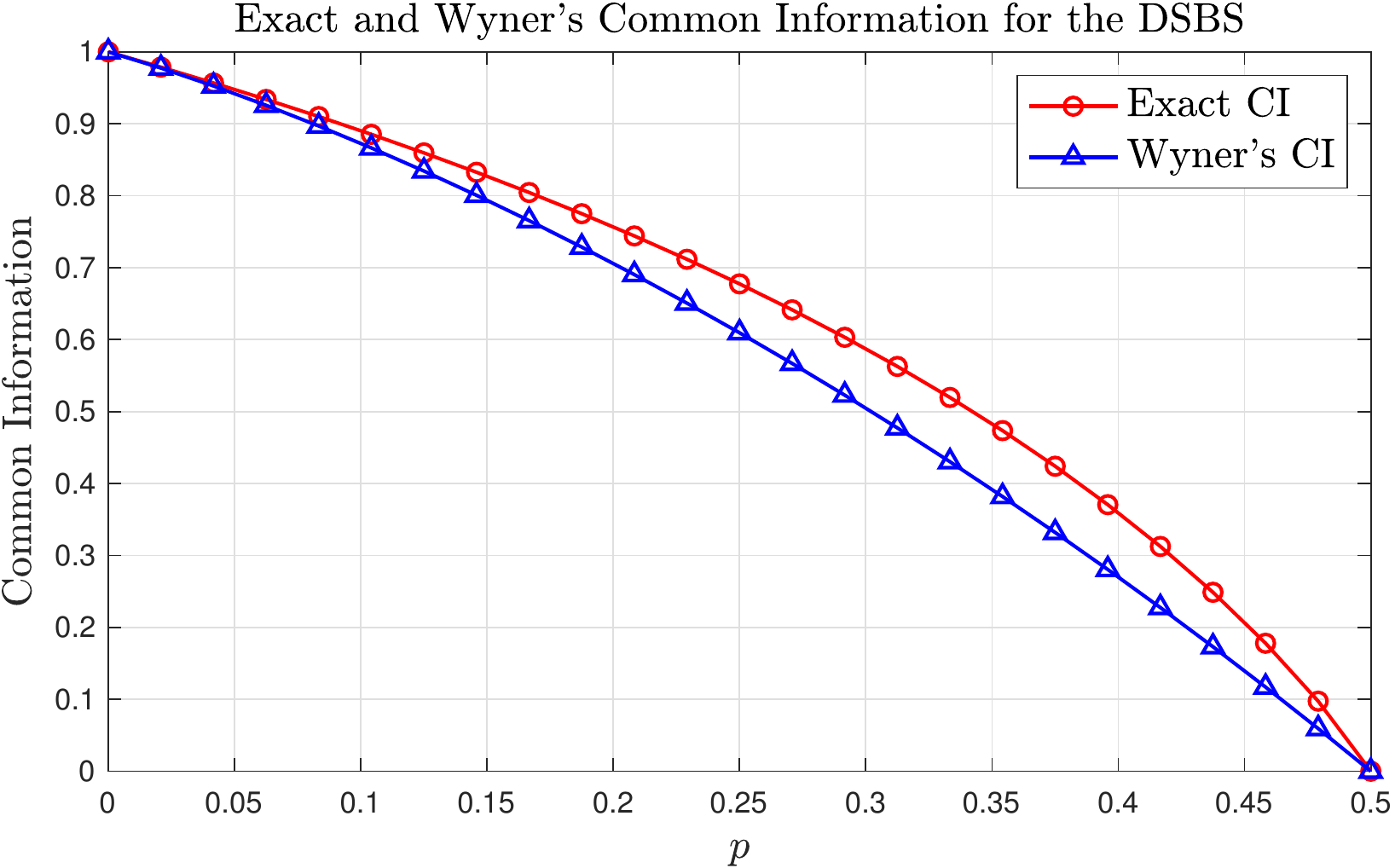}
\caption{Plot of the exact and Wyner's common information for the DSBS}
\label{fig:eci_dsbs}
\end{figure}

This result follows by Theorem~\ref{thm:equivalence} and~\eqref{eqn:RCI_infty0}--\eqref{eqn:RCI_infty} in Proposition~\ref{prop:RCI_dsbs} where we stated $T_\infty$ and $\tilT_\infty$ for the DSBS. From Proposition~\ref{prop:dsbs_exact} and Proposition~\ref{prop:wyner_dsbs}, we see that the difference between the exact and Wyner's common information is
\begin{equation}
T_{\Ex}(\pi_{XY}) - C_\Wyner(\pi_{XY}) =2a^2\log\bigg( \frac{a^2+\bara^2}{2a\bara} \bigg)>0. \label{eqn:dsbs_gap}
\end{equation}
This difference is positive for all $a\in (0,1/2)$; equivalently, $p\in (0,1/2)$. This answers the open problem posed by \citet{KLE2014}.  We conclude that there exists sources (namely the DSBS with $p\in (0,1/2)$) for which the exact common information strictly exceeds Wyner's common information.  Note that the DSBS does not satisfy any of the sufficient conditions in Section~\ref{sec:equality}. The two common information quantities and their gap are illustrated in Fig.~\ref{fig:eci_dsbs}.


\begin{proof}[Proof Sketch of Proposition~\ref{prop:dsbs_exact}]
Because $T_{\Ex}(\pi_{XY})=\tilT_\infty(\pi_{XY})$ (Theorem~\ref{thm:equivalence}), it suffices to prove   \eqref{eqn:RCI_infty0}--\eqref{eqn:RCI_infty}.  The crux in the evaluation of both bounds is in the understanding of the maximal cross-entropy terms in $\oGamma(\pi_{XY})$ and $\uGamma(\pi_{XY})$. 
 
In view of \eqref{eqn:rci_upper}, we first evaluate the upper bound $\oGamma(\pi_{XY})$ for the DSBS. We set $P_WP_{X|W}P_{Y|W}$ as the distribution that achieves the minimum in Wyner's common information. Hence $W\sim\mathrm{Bern}(1/2)$ and $X = W\oplus A$ and $Y = W\oplus B$ where $A$ and $B$ are mutually independent $\mathrm{Bern}(a)$ random variables. The key terms in $\oGamma(\pi_{XY})$ are thus the  maximal cross-entropies for each $w$. For a fixed $w$, this  can be simplified as follows 
\begin{align}
&\rvH_{\infty}(P_{X|W=w}, P_{Y|W=w}\|\pi_{XY}) \nn\\*
 &=\max_{Q_{XY}\in\calC(P_{X|W=w}, P_{Y|W=w})}\sum_{x,y}Q_{XY}(x,y)\log\frac{1}{\pi_{XY}(x,y)}\\
 &=\log\frac{1}{\alpha}+2\min\{a,\bara\}\log\frac{\alpha}{\beta}=\log\frac{1}{\alpha}+2a\log\frac{\alpha}{\beta}.
\end{align}
See Example~\ref{ex:dsbs_renyi} for details of this calculation. Hence, we have
\begin{align}
\oGamma(\pi_{XY})\le -2h(a) +\log\frac{1}{\alpha}+2a\log\frac{\alpha}{\beta}.
\end{align}
Recalling that $\alpha=\frac{1}{2}(a^2 + \bara^2)$ and $\beta=a\bara$ completes the proof of the upper bound.

The evaluation of the lower bound in~\eqref{eqn:uGamma_inf} is more involved but is essentially inspired by \citet{WynerCI} in his evaluation of Wyner's common information for the DSBS. Let $\alpha_w :=\Pr(X=0|W=w)$ and $\beta_w=\Pr(Y=0|W=w)$. The condition that    $P_{XY}=\pi_{XY}$ implies that $\bbE[\alpha_W]=\bbE[\beta_W]= \Pr(X=0)=\Pr(Y=0)= 1/2$ and $\bbE[\alpha_W\beta_W]=\Pr(X=0,Y=0)=\alpha$. In view of these equalities, we  lower bound the maximal cross-entropy for each $w$ as follows
\begin{align}
&\rvH_{\infty}(P_{X|W=w}, P_{Y|W=w'}\|\pi_{XY}) \nn\\*
 &=\max_{Q_{XY}\in\calC(P_{X|W=w}, P_{Y|W=w'})}\sum_{x,y}Q_{XY}(x,y)\log\frac{1}{\pi_{XY}(x,y)}\\
 &=\log\frac{1}{\alpha}+\Big(\min\{\alpha_w,\overline{\beta_{w'}}\} +\min\{\overline{\alpha_w}, {\beta_{w'}}\} \Big)\log\frac{\alpha}{\beta}\\
  &\ge\log\frac{1}{\alpha}+\Big(\min\{\alpha_w, \overline{\alpha_{w}}\} +\min\{\beta_{w'}, \overline{\beta_{w'}}\} \Big)\log\frac{\alpha}{\beta}  .
\end{align}
Now, we plug this lower bound into the definition of $\uGamma(\pi_{XY})$ in~\eqref{eqn:uGamma_inf}. We conclude by leveraging ideas from \citet{WynerCI}; these ideas  include the concavity of the functions $x \in (0,1/2)\mapsto h(x)$ and $x\in\bbR_+\mapsto\sqrt{x}$, to solve the  optimization problem in~\eqref{eqn:uGamma_inf}. See \cite{YuTan2020_exact} for details. 
\end{proof} 

\section{Jointly Gaussian Sources} \label{sec:exact_gauss}
In this final section, we briefly discuss the generalization of the concept of exact common information to continuous sources  and, specifically, the important family of jointly Gaussian sources. Per the theme of this section, we aim to establish that the unnormalized R\'enyi common information of order $\infty$ is equal to the exact common information.  However, this is not true in general for arbitrary continuous sources. Nevertheless, the proof that $T_{\Ex}(\pi_{XY})\ge\tilT_\infty(\pi_{XY})$ (in the second half of Section~\ref{sec:sketch_equiv}) goes through verbatim as the weakly typical set and its properties,  which are applicable to arbitrary sources, are exploited therein. It also holds that $T_{\Ex}(\pi_{XY})= \tilT_\infty(\pi_{XY})$ for sources with countable alphabets; this follows from another typicality and truncation argument. Hence, it remains to establish some mild regularity conditions such that $T_{\Ex}(\pi_{XY})\ge\tilT_\infty(\pi_{XY})$ holds for sources with uncountable alphabets.  

\enlargethispage{2\baselineskip}
In this section, we use $f_{XY}$ to denote the PDF of the distribution $\pi_{XY}$, which is assumed to be absolutely continuous with respect to the Lebesgue measure on $\bbR^2$. 
To state the results succinctly, for each $\epsilon>0$ and $n\in\bbN$, we define
\begin{equation}
\kappa_{\epsilon, n} :=\sup_{(x,y)\in \calI_{\epsilon,n}^2 } \left\{  \Big| \frac{\partial}{\partial x} \log f_{XY}(x,y) \Big|+\Big| \frac{\partial}{\partial y} \log f_{XY}(x,y) \Big| \right\},
\end{equation}
where $\calI_{\epsilon,n}$ is the interval $\big[-\sqrt{n (1+\epsilon)}, \sqrt{n (1+\epsilon)}\big]$. The following lemma and Proposition~\ref{prop:gauss_exact} to follow are due to the present authors~\cite{YuTan2020_exact}.
\begin{lemma}\label{lem:regularity}
Assume that the joint source $\pi_{XY}$ satisfies the following three assumptions. 
\begin{enumerate}
\item[(A1)] $\pi_{XY}$ is absolutely continuous on $\bbR^2$ with  $\bbE[X^2 ], \bbE[Y^2]<\infty$; 
\item[(A2)] The PDF $f_{XY}$ is log-concave and continuously differentiable and that $I(X;Y)$ exists (and thus is finite);
\item[(A3)] $\log \kappa_{\epsilon,n}$ is sub-exponential in $n$ (i.e., $\frac{1 }{n}\log\log \kappa_{\epsilon,n}\to 0$ as $n\to\infty$).
\end{enumerate}

\noindent If there exists a sequence of fixed-length distributed source simulation codes with rate $R$  (Definition~\ref{def:wyner_code}) that generates $P_{X^n Y^n}$ (defined in~\eqref{eqn:syn_dist}) such that 
\begin{equation}
D_{\infty}\big(P_{X^nY^n} \big\| \pi_{XY}^n \big) = o\bigg( \frac{1}{n+\log \kappa_{\epsilon,n} }\bigg) ,\label{eqn:rate_Dinf}
\end{equation}
then there   exists a sequence of  variable-length distributed source simulation codes  with rate $R$ (Definition~\ref{def:vlcode}) that  generates $\pi_{XY}^n$ exactly. In other words, $T_\Ex(\pi_{XY})\le\tilT_\infty^{\mathrm{cts}}(\pi_{XY})$  where $\tilT_\infty^{\mathrm{cts}}(\pi_{XY})$ is the infimum of all rates $R$ such that \eqref{eqn:rate_Dinf} holds for all $\epsilon>0$.  
\end{lemma}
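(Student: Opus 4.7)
The plan is to generalize the mixture decomposition argument used in Section~\ref{sec:sketch_equiv} for the finite-alphabet direction $T_\Ex(\pi_{XY}) \le \tilT_\infty(\pi_{XY})$ to the continuous setting, with the novelty that the ``fallback'' sub-code must be implemented via truncation and quantization whose spatial scale is calibrated to the smoothness parameter $\kappa_{\epsilon,n}$.

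First I would set $\epsilon_n := D_\infty(P_{X^nY^n}\|\pi_{XY}^n)$, which by hypothesis satisfies $\epsilon_n = o(1/(n+\log\kappa_{\epsilon,n}))$. The $D_\infty$ bound translates to the pointwise inequality $P_{X^nY^n}(B) \le 2^{\epsilon_n}\pi_{XY}^n(B)$ for every Borel $B \subset \bbR^{2n}$, so $P^{\mathrm{res}} := (2^{\epsilon_n}\pi_{XY}^n - P_{X^nY^n})/(2^{\epsilon_n}-1)$ is a probability measure, and
\begin{equation}
    \pi_{XY}^n = 2^{-\epsilon_n}P_{X^nY^n} + (1-2^{-\epsilon_n})P^{\mathrm{res}}.
\end{equation}
The proposed variable-length code sends one preliminary bit (a $\Bern(2^{-\epsilon_n})$ coin flip) to both processors; on heads it invokes the given fixed-length approximate code at rate $R$, and on tails it invokes a cleanup exact code for $P^{\mathrm{res}}$ whose expected per-block length $L_n$ needs only satisfy $(1-2^{-\epsilon_n}) L_n / n \to 0$, equivalently $L_n = o(n/\epsilon_n)$, which the decay hypothesis generously accommodates.

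To construct the cleanup code I would carry out three sub-steps. (i)~Truncate to the cube $\calE_n := (\calI_{\epsilon,n}^{\,2})^{n}$; assumptions (A1) and (A2) yield sub-exponential tails for $(X_i,Y_i)$ (log-concavity together with finite second moment), so $\pi_{XY}^n(\calE_n^c) \le 2^{-n\gamma}$ for some $\gamma>0$ and the rare overflow event is handled by a deterministic overflow sub-code at $O(n)$ bits, weighted by an exponentially small probability. (ii)~Quantize each coordinate at scale $\delta_n$ chosen so that $n\delta_n\kappa_{\epsilon,n}\to 0$ (for instance $\delta_n = 1/(n\kappa_{\epsilon,n})$); by (A3) the log-density of $f_{XY}^n$ varies by $O(n\delta_n\kappa_{\epsilon,n})=o(1)$ across any single $\delta_n$-cell in $\calE_n$, so $\pi_{XY}^n$ restricted to that cell is within a multiplicative $1+o(1)$ factor of the product of its one-dimensional conditional marginals on the cell. (iii)~Transmit the cell index to both processors as common randomness, using at most $O(n + n\log\kappa_{\epsilon,n})$ bits per block, and have each processor generate its within-cell refinement; because the within-cell distribution is only near-product, this step introduces a residual $D_\infty$ error of $o(1)$, which I would absorb by one further level of mixture decomposition whose fallback recursively invokes the same cell-based scheme.

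Assembling these pieces, the expected total description length per source symbol is
\begin{equation}
    \frac{1}{n} + 2^{-\epsilon_n} R + (1-2^{-\epsilon_n})\, O\!\left(1 + \log\kappa_{\epsilon,n}\right) \longrightarrow R,
\end{equation}
where the limit uses $(1-2^{-\epsilon_n}) = \Theta(\epsilon_n)$ together with $\epsilon_n(1+\log\kappa_{\epsilon,n}) = o(1)$, which is precisely the decay hypothesis in~\eqref{eqn:rate_Dinf}. The main obstacle I anticipate is sub-step~(iii): the within-cell conditional distribution of $\pi_{XY}^n$ is not exactly a product of marginals, so the naive cell-based scheme yields only approximate exactness, and upgrading it to true exactness via iterated mixture decomposition requires a quantitative argument that the recursion-level $D_\infty$ errors $\eta_k$ contract geometrically. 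Making this work hinges on (A3) to establish a bound of the form $\eta_k \le C\,(n\delta_n\kappa_{\epsilon,n})\,\eta_{k-1}$ with contraction constant $C n\delta_n\kappa_{\epsilon,n}<1$; once geometric contraction is in hand, the resulting geometric series of expected-length contributions is summable and does not disturb the leading-order rate $R$. The sub-exponential growth of $\log\kappa_{\epsilon,n}$ is exactly what keeps both the cell count and the iteration depth tractable.
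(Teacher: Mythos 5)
Your mixture-decomposition framework is the right generalization of the finite-alphabet sketch in Section~\ref{sec:sketch_equiv}, and you correctly identify the need for a truncated, quantized fallback calibrated to $\kappa_{\epsilon,n}$. However, steps (ii)--(iii) have a genuine gap. The target of the cleanup code is $P^{\mathrm{res}} = (2^{\epsilon_n}\pi_{XY}^n - P_{X^nY^n})/(2^{\epsilon_n}-1)$, yet the within-cell smoothness you invoke---the $O(n\delta_n\kappa_{\epsilon,n})$ oscillation of $\log f_{XY}^n$ over a $\delta_n$-cell---is a property of $\pi_{XY}^n$'s density, not of $P^{\mathrm{res}}$'s. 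Assumptions (A2)--(A3) give no Lipschitz control whatsoever over the residual density: because the fixed-length code may saturate the $D_\infty$ bound on a positive-measure set (i.e.\ $\rmd P_{X^nY^n}/\rmd\lambda = 2^{\epsilon_n}\,\rmd\pi_{XY}^n/\rmd\lambda$ there), $\rmd P^{\mathrm{res}}/\rmd\lambda$ can vanish on arbitrary subsets of the very cells your scheme fills with a product-form refinement. Any cell-constant approximation $\hat{Q}_n$ then has $D_\infty(\hat{Q}_n\|P^{\mathrm{res}}) = \infty$, so the level-one mixture decomposition does not even produce a valid residual measure and the recursion never starts. In short, (A2)--(A3) let a cell scheme approximate $\pi_{XY}^n$ well in $D_\infty$; they say nothing about the residual, and the cleanup code must hit the residual.

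Two secondary issues compound this. Your example scale $\delta_n = 1/(n\kappa_{\epsilon,n})$ gives $n\delta_n\kappa_{\epsilon,n} = 1$, which is $\Theta(1)$ rather than the $o(1)$ you assert, so even the $\pi_{XY}^n$-cell approximation would not be within a $1+o(1)$ multiplicative factor of uniform (you need $\delta_n = o(1/(n\kappa_{\epsilon,n}))$). And the asserted contraction $\eta_k \le C\,(n\delta_n\kappa_{\epsilon,n})\,\eta_{k-1}$ cannot hold as stated: each recursion level quantizes at the same scale against the same reference, so the per-level quantization error is a fixed constant independent of $k$; a correct recursion would have to converge via geometric decay of the mixture \emph{weights}, not a shrinking $\eta_k$, and even that accounting collapses once the level-one $D_\infty$ error is infinite. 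The paper's own remarks after Proposition~\ref{prop:gauss_exact} point to handling the continuous residual via Li--El Gamal's dyadic-decomposition exact-generation scheme rather than by trying to make $P^{\mathrm{res}}$ near-product within cells.
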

In short, if the continuous source $\pi_{XY} \in\calP(\bbR^2)$  satisfies Assumptions (A1)--(A3) and the R\'enyi divergence of order $\infty$ vanishes sufficiently rapidly relative to the smoothness of the source density (captured by~$\kappa_{\epsilon, n}$), we are able to relate $T_{\Ex}$ to $\tilT_\infty^{\mathrm{cts}}$, a proxy of $\tilT_\infty$. 

One important example satisfying the conditions in Lemma~\ref{lem:regularity} is the class of jointly Gaussian sources as described in Section~\ref{sec:wyner_gauss}. Consider two jointly Gaussian random variables $X$ and $Y$  that have zero means and unit variances, and the pair $(X,Y)$ has correlation coefficient $\rho\in (0,1)$.\footnote{The results also hold for negative correlation coefficients in which case $\rho$ should be replaced by $|\rho|$.} In this case, it is easy to check that 
\begin{equation}
\kappa_{\epsilon, n} = \sup_{(x,y)\in \calI_{\epsilon,n}^2}  \left| \frac{x-\rho y}{1-\rho^2}\right|+\left| \frac{y-\rho x}{1-\rho^2}\right|=\frac{2\sqrt{ n (1+\epsilon)}}{1-\rho}.
\end{equation}
Hence, for every fixed $\epsilon>0$ and $\rho\in (0,1)$, $\log \kappa_{\epsilon, n} = O(\log n)$ is clearly sub-exponential in $n$. Furthermore, $(n+\log \kappa_{\epsilon, n})^{-1} =\Theta( 1/n)$. Hence, by Lemma~\ref{lem:regularity}, if there exists  a sequence of fixed-length codes of rate $R$ such that $D_\infty(P_{X^nY^n}\| \pi_{XY}^n) = o(1/n)$, then there also exists a sequence of rate-$R$ variable-length codes that exactly generates $\pi_{XY}^n$.

Using Lemma~\ref{lem:regularity}, we are able to provide bounds for the exact common information of jointly Gaussian sources.
\begin{proposition}\label{prop:gauss_exact}
For a jointly Gaussian source with correlation coefficient $\rho\in (0,1)$, 
\begin{align}
\hspace{-.2in}\frac{1}{2}\log\left(\frac{1+\rho}{1-\rho}\right) &= C_{\Wyner}(\pi_{XY}) \le T_\infty(\pi_{XY})\le\tilT_\infty  (\pi_{XY}) \\
\hspace{-.2in}&=T_{\Ex}(\pi_{XY})\le \frac{1}{2}\log\left(\frac{1+\rho}{1-\rho}\right)+ \frac{\rho\log \rme}{1+\rho}. \label{eqn:Texact_gauss}
\end{align}
\end{proposition}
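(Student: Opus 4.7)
\medskip

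The plan is to establish each piece of the chain of (in)equalities separately, with the main novelty concentrated on the upper bound for $T_{\Ex}(\pi_{XY})$ in the Gaussian case. The leftmost equality $C_{\Wyner}(\pi_{XY})=\tfrac12\log\frac{1+\rho}{1-\rho}$ has already been derived in Section~\ref{sec:wyner_gauss}. The first inequality $C_{\Wyner}(\pi_{XY}) \le T_{\infty}(\pi_{XY})$ is immediate from the monotonicity in the R\'enyi order of $s\mapsto T_{1+s}(\pi_{XY})$ (cf.\ \eqref{eqn:monoT}) together with $T_1(\pi_{XY}) = C_{\Wyner}(\pi_{XY})$, and $T_\infty(\pi_{XY})\le \tilT_\infty(\pi_{XY})$ is trivial, since the unnormalized criterion is strictly more stringent than the normalized one.

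The central equality $\tilT_\infty(\pi_{XY}) = T_{\Ex}(\pi_{XY})$ for the Gaussian source will be obtained by a two-sided argument. The inequality $\tilT_\infty(\pi_{XY})\le T_{\Ex}(\pi_{XY})$ for arbitrary (including continuous) sources follows {\em mutatis mutandis} from the argument in Section~\ref{sec:sketch_equiv}: the concatenation-and-shaping construction there only uses the weakly typical set for $W_k$ and the data-processing inequality for $D_\infty$, both of which are applicable to sources on arbitrary measurable spaces.  The reverse direction  $T_{\Ex}(\pi_{XY})\le \tilT_\infty^{\mathrm{cts}}(\pi_{XY})$ will follow from Lemma~\ref{lem:regularity}, so I need to verify (A1)--(A3) for the Gaussian source with covariance matrix \eqref{eqn:cov_mat}. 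Property (A1) is immediate as $\bbE[X^2]=\bbE[Y^2]=1$. For (A2), the density $f_{XY}$ is a Gaussian and hence log-concave and $C^\infty$, and $I(X;Y)=-\tfrac12\log(1-\rho^2)<\infty$. For (A3), a direct computation gives $\kappa_{\epsilon,n} = \frac{2\sqrt{n(1+\epsilon)}}{1-\rho}$, so $\log\kappa_{\epsilon,n} = O(\log n)$ is sub-exponential in $n$. Moreover $(n+\log\kappa_{\epsilon,n})^{-1} = \Theta(1/n)$, so the convergence rate~\eqref{eqn:rate_Dinf} reduces to $D_\infty(P_{X^nY^n}\|\pi_{XY}^n) = o(1/n)$, which needs to be verified for the scheme  used to prove the upper bound below.

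The upper bound $T_{\Ex}(\pi_{XY})\le \tfrac12\log\frac{1+\rho}{1-\rho}+\frac{\rho\log\rme}{1+\rho}$ will be proved by evaluating the (continuous analogue of the) functional $\oGamma(\pi_{XY})$ in \eqref{eqn:Gamma_fn} at the Gaussian test channel
\begin{equation*}
W\sim\calN(0,1),\quad X=\sqrt{\rho}\,W+\sqrt{1-\rho}\,N_1,\quad Y=\sqrt{\rho}\,W+\sqrt{1-\rho}\,N_2,
\end{equation*}
with $N_1,N_2\sim\calN(0,1)$ independent. For each $w$, both $P_{X|W=w}$ and $P_{Y|W=w}$ are $\calN(\sqrt{\rho}\,w,1-\rho)$, so the inner optimization in the definition of $\rvH_\infty(P_{X|W=w},P_{Y|W=w}\|\pi_{XY})$ reduces to maximizing $\bbE_Q[-\log f_{XY}(X,Y)]$, where only $\bbE_Q[XY]$ is free (the marginal second moments being fixed). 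Since $-\log f_{XY}(x,y)$ has the negative-definite cross term $-\frac{\rho\, xy}{1-\rho^2}\log\rme$ and $\rho>0$, the maximum is attained by the anti-monotone Gaussian coupling, which gives $\bbE_Q[XY]=\rho w^2-(1-\rho)$. A direct calculation then yields
\begin{align*}
\rvH_\infty(P_{X|W=w},P_{Y|W=w}\|\pi_{XY}) = \log(2\pi\sqrt{1-\rho^2}) + \Bigl(\tfrac{\rho w^2}{1+\rho}+1\Bigr)\log\rme,
\end{align*}
whose expectation over $W$, combined with $H(XY|W)=H(X|W)+H(Y|W)=\log(2\pi\rme(1-\rho))$, telescopes to $\tfrac12\log\frac{1+\rho}{1-\rho}+\frac{\rho\log\rme}{1+\rho}$.

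The main obstacle will be rigorously extending the soft-covering achievability in Theorem~\ref{thm:exact_sl} to the continuous setting so that it delivers the $o(1/n)$ decay of $D_\infty(P_{X^nY^n}\|\pi_{XY}^n)$ required by Lemma~\ref{lem:regularity}. This will require a one-shot $D_\infty$ soft-covering bound in the spirit of Lemma~\ref{lem:one-shot-sc}, applied to \emph{truncated} Gaussian product distributions $P_{W^n}$, $P_{X|W}^n$, $P_{Y|W}^n$ restricted to typical sets as in \eqref{eqn:trunc1}--\eqref{eqn:trunc3}. The truncation ensures boundedness of the relevant Radon--Nikodym derivatives, handling the type-overflow issue discussed in Section~\ref{sec:type_over}, while a careful accounting of the exponents of the covering and atypical probabilities delivers the requisite super-polynomial decay rate. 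Once this is in place, Lemma~\ref{lem:regularity} converts the fixed-length $D_\infty$-vanishing code into an exact variable-length code at the same rate, and the chain of inequalities is closed.
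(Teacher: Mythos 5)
Your plan follows essentially the same route as the paper's sketch: verify the regularity conditions (A1)--(A3) of Lemma~\ref{lem:regularity}, apply that lemma to get $T_{\Ex}(\pi_{XY})\le\tilT_\infty^{\mathrm{cts}}(\pi_{XY})$, and then establish the achievability bound $\tilT_\infty^{\mathrm{cts}}(\pi_{XY})\le\oGamma$ by evaluating the maximal cross-entropy at the Gaussian test channel. Your computation of $\oGamma$ is correct: the anti-monotone coupling gives $\bbE_Q[XY]=\rho w^2-(1-\rho)$, whence $\rvH_\infty(P_{X|W=w},P_{Y|W=w}\|\pi_{XY})=\log(2\pi\sqrt{1-\rho^2})+\bigl(\frac{\rho w^2}{1+\rho}+1\bigr)\log\rme$, and combining $\bbE_W[\rvH_\infty]$ with $H(XY|W)=\log(2\pi\rme(1-\rho))$ telescopes to $\tfrac12\log\frac{1+\rho}{1-\rho}+\frac{\rho\log\rme}{1+\rho}$.

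Two points worth flagging. First, your chain proves $\tilT_\infty\le T_{\Ex}\le\tilT_\infty^{\mathrm{cts}}$, which establishes the full string of inequalities in the proposition but does \emph{not} by itself pin down the claimed equality $\tilT_\infty=T_{\Ex}$, since $\tilT_\infty^{\mathrm{cts}}\ge\tilT_\infty$ by definition. To close this, one also needs $\tilT_\infty^{\mathrm{cts}}\le\tilT_\infty$, i.e., that any rate achievable with $D_\infty\to 0$ can be achieved with the super-polynomial decay $D_\infty=o(1/n)$; the paper calls $\tilT_\infty^{\mathrm{cts}}$ only a ``proxy'' of $\tilT_\infty$ and likewise leaves this step at the sketch level. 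Second, the paper's remark identifies a second ingredient you do not mention explicitly, namely Li and El Gamal's dyadic decomposition, which is needed to replace the ``lossless fallback'' term $\log(|\calX||\calY|)$ in the mixture decomposition argument (this term diverges for continuous alphabets). Since this dyadic scheme is embedded in the proof of Lemma~\ref{lem:regularity}, your invocation of that lemma implicitly uses it, so this is not a gap so much as an unacknowledged dependency. Finally, note that Lemma~\ref{lem:one-shot-sc} is stated only for R\'enyi orders $1+s$ with $s\in(0,1]$, so the ``one-shot $D_\infty$ soft-covering bound'' you invoke would need to be proved separately; you correctly identify this as the main obstacle, but the reader should be aware that the paper provides no such lemma.
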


Thus, the upper and lower bounds differ by $\rho/(1+\rho)$. 
These bounds are illustrated in Fig.~\ref{fig:eci_gauss}. 

\begin{figure}[!ht]
\centering
\includegraphics[width = .8\columnwidth]{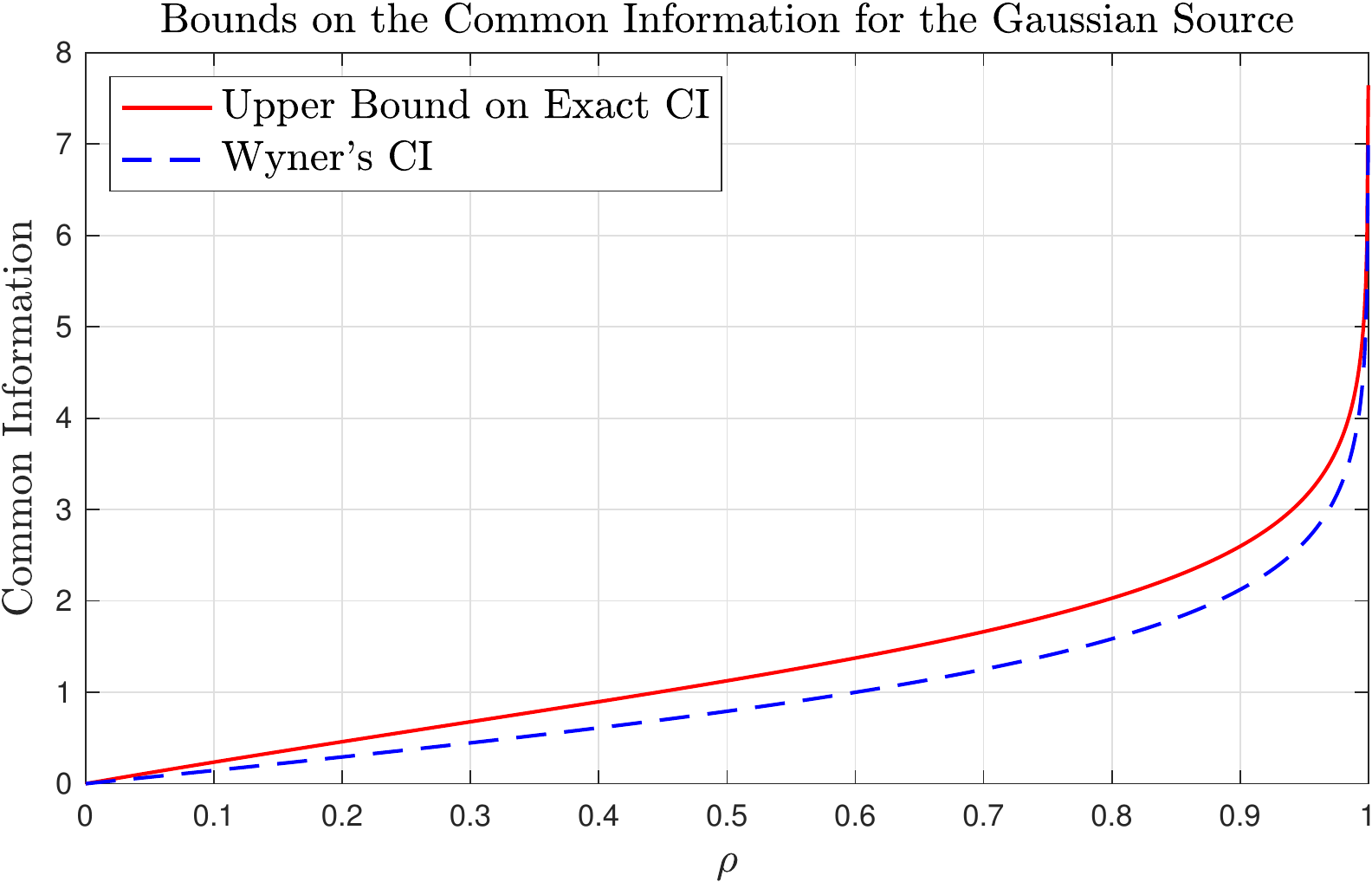}
\caption{Plot of the upper bound on exact common information~\eqref{eqn:Texact_gauss} and Wyner's common information (lower bound on the exact common information) for the jointly Gaussian source as stated in~\eqref{eqn:gaussian_wci}. The exact  common information lies between these two curves and Conjecture~\ref{conj:gauss} says the upper bound on the exact  common information is tight. }
\label{fig:eci_gauss}
\end{figure}

\begin{remark}
\citet{LiElgamal2017} showed using a dyadic decomposition scheme that 
\begin{align}
T_{\Ex}(\pi_{XY})\le  G(\pi_{XY})&\le I(X;Y)+24 =  \frac{1}{2}\log\left(\frac{1}{1-\rho^2}\right)+24.
\end{align}
This  bound by \citet{LiElgamal2017} is based on a {\em one-shot} scheme and hinges on upper bounding the common entropy $G(\pi_{XY})$, defined in~\eqref{eqn:common_ent}. The coding scheme involved in proving Proposition~\ref{prop:gauss_exact}, however, utilizes  {\em multiple copies} of the source and hence,   naturally results in a better upper bound. In fact, simple algebra yields that  for all $\rho \in (0,1)$
\begin{align}
\!\left[\frac{1}{2}\log\left(\frac{1}{1\!- \!\rho^2}\right)\!+\! 24\right]-\left[ \frac{1}{2}\log\left(\frac{1\!+ \!\rho}{1\!-\! \rho}\right) +  \frac{\rho\log\rme}{1+ \rho}\right]  
 \ge 22.28 \mbox{ bits/symb}.
\end{align}
The strategy to achieve the upper bound in \eqref{eqn:Texact_gauss}, which we will not describe in detail here,  is a combination of Li and El Gamal's  dyadic decomposition scheme \cite{LiElgamal2017} and the construction of a sequence of fixed-length codes that yields  $\{ P_{X^nY^n} \}_{n\in\bbN}$ satisfying $D_\infty(P_{X^nY^n}\|\pi_{XY}^n)= o(1/n)$ (as hinted by Lemma~\ref{lem:regularity}). 
\end{remark}

We remark that for the DSBS, the upper bound in Proposition~\ref{prop:dsbs_exact} is tight. It is thus natural to conjecture that the upper bound  in~\eqref{eqn:Texact_gauss} is  also tight which implies that the gap between Wyner's common information and the exact common information for the  bivariate Gaussian source is exactly $(\rho\log \rme)/(1+\rho)$. We state this as a conjecture. 
\begin{conjecture}[Exact common information for a jointly Gaussian source] \label{conj:gauss}
The exact common information for a jointly Gaussian source with correlation coefficient $\rho\in (0,1) $ is 
\begin{equation}
T_{\Ex}(\pi_{XY}) \stackrel{?}{=} \frac{1}{2}\log\left(\frac{1+\rho}{1-\rho}\right)+ \frac{\rho\log \rme}{1+\rho}. \label{eqn:Texact_gauss_conj}
\end{equation}
\end{conjecture}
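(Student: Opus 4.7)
\textbf{Proof plan for Conjecture~\ref{conj:gauss}.} The natural strategy is to first reduce the exact common information to the unnormalized $\infty$-R\'enyi common information and then to the lower pseudo-common information of order $\infty$, and finally to evaluate the latter for the jointly Gaussian source. Concretely, I will argue that $T_{\Ex}(\pi_{XY})=\tilT_{\infty}(\pi_{XY})$ for the Gaussian source by combining the proof technique of Section~\ref{sec:sketch_equiv} (which extends verbatim to continuous alphabets) with Lemma~\ref{lem:regularity}, whose three regularity conditions are verified to hold for the Gaussian source in Section~\ref{sec:exact_gauss}. Combined with the lower bound $\tilT_{\infty}(\pi_{XY})\ge\uGamma(\pi_{XY})$ from Theorem~\ref{thm:exact_sl} (which I will need to extend from finite alphabets to continuous alphabets, but whose single-letterization argument based on Lemma~\ref{lem:lb_renyi} and the chain rule for coupling sets carries over), matters reduce to showing $\uGamma(\pi_{XY}) \ge \frac{1}{2}\log\frac{1+\rho}{1-\rho}+\frac{\rho\log\rme}{1+\rho}$; the matching upper bound is already in Proposition~\ref{prop:gauss_exact}.

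The heart of the proof is the explicit evaluation of $\uGamma(\pi_{XY})$ for the jointly Gaussian source. Fix any admissible $P_{W}P_{X|W}P_{Y|W}$ with $P_{XY}=\pi_{XY}$, and write $\mu_X^w=\bbE[X|W=w]$, $\sigma_X^{w,2}=\var(X|W=w)$ (and analogously for $Y$). Since $\log\pi_{XY}(x,y)$ is a quadratic form, for any coupling $Q_{XY}\in\calC(P_{X|W=w},P_{Y|W'=w'})$,
\begin{equation*}
\bbE_{Q_{XY}}[-\log\pi_{XY}(X,Y)] = \log\big(2\pi\sqrt{1-\rho^2}\big) + \tfrac{1}{2(1-\rho^2)}\bbE_{Q_{XY}}[X^2-2\rho XY + Y^2]\log\rme,
\end{equation*}
so the maximal cross-entropy $\rvH_\infty(P_{X|W=w},P_{Y|W'=w'}\|\pi_{XY})$ is obtained by minimizing $\bbE[XY]$, and the Cauchy--Schwarz inequality yields $\bbE[XY]\ge \mu_X^w\mu_Y^{w'}-\sigma_X^w\sigma_Y^{w'}$. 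Taking the expectation under any coupling $Q_{WW'}\in\calC(P_W,P_W)$, applying the maximum-entropy bound $H(X|W)+H(Y|W)\le \bbE_W[\tfrac{1}{2}\log(2\pi\rme\sigma_X^{W,2})]+\bbE_W[\tfrac{1}{2}\log(2\pi\rme\sigma_Y^{W,2})]$ (valid by the Markov chain $X-W-Y$), and using the moment constraints $\bbE_W[\mu_X^{W,2}+\sigma_X^{W,2}]=1$, $\bbE_W[\mu_Y^{W,2}+\sigma_Y^{W,2}]=1$, $\bbE_W[\mu_X^W\mu_Y^W]=\rho$, one arrives at a bound of the form
\begin{equation*}
\uGamma(\pi_{XY}) \ge \tfrac{1}{2}\log(1-\rho^{2}) - \bbE_W[\log(\sigma_X^W\sigma_Y^W)] + \tfrac{\log\rme}{1-\rho^2}\bigl(1+\rho\inf_{Q}\bbE_Q[\sigma_X^W\sigma_Y^{W'}]-\rho\sup_{Q}\bbE_Q[\mu_X^W\mu_Y^{W'}]\bigr).
\end{equation*}
A direct calculation shows this lower bound equals the target when evaluated at the Gaussian Wyner-optimal kernels (where $\sigma_X^W=\sigma_Y^W=\sqrt{1-\rho}$ and $\mu_X^W=\mu_Y^W=\sqrt{\rho}W$, using the equality coupling for the supremum).

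The main obstacle is the outer infimum: one must show that no choice of $(P_W,P_{X|W},P_{Y|W})$ beats the Gaussian-Wyner decomposition in the displayed bound. The plan here is to parametrize the problem by $a:=\bbE_W[\mu_X^{W,2}]\in[\rho,1]$ and $b:=\bbE_W[\mu_Y^{W,2}]\in[\rho,1]$ (constrained by $ab\ge\rho^{2}$ via Cauchy--Schwarz on the cross-moment $\rho=\bbE_W[\mu_X^W\mu_Y^W]$), and show that the function
\[
\varphi(a,b)=-\tfrac12\log\bigl((1-a)(1-b)\bigr)+\tfrac{\rho\log\rme}{1-\rho^{2}}\bigl(\inf_Q\bbE_Q[\sigma_X^W\sigma_Y^{W'}]-\sqrt{ab}\bigr)
\]
is minimized at $a=b=\rho$, with the infimum over $Q$ bounded below by $\sqrt{(1-a)(1-b)}$ via an anti-comonotone coupling argument together with Jensen's inequality applied to $\sigma_X^W$ and $\sigma_Y^W$. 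Establishing this minimization rigorously is the technical crux and will require a Lagrangian/KKT analysis, possibly complemented by an entropy-power-inequality step to handle non-Gaussian conditional kernels for which the maximum-entropy bound on $H(XY|W)$ is slack. If this optimization argument goes through, it yields $\uGamma(\pi_{XY})=\frac{1}{2}\log\frac{1+\rho}{1-\rho}+\frac{\rho\log\rme}{1+\rho}$, which together with Proposition~\ref{prop:gauss_exact} proves the conjecture.
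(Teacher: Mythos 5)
This statement is labeled as a \emph{conjecture} in the monograph precisely because the authors do not have a proof; what the paper actually establishes (Proposition~\ref{prop:gauss_exact}, via Lemma~\ref{lem:regularity} and truncated-i.i.d.\ coding) is only the two-sided bound $C_\Wyner(\pi_{XY})\le T_{\Ex}(\pi_{XY})\le \frac{1}{2}\log\frac{1+\rho}{1-\rho}+\frac{\rho\log\rme}{1+\rho}$, and the open question is whether the upper bound is tight. So there is no paper proof for your argument to be compared against, and it is not surprising that your plan stalls at exactly the step the authors leave open.

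The concrete error in your plan is the direction of the coupling bound. You need a \emph{lower} bound on $\inf_{Q\in\calC(P_W,P_W)}\bbE_Q[\sigma_X^W\sigma_Y^{W'}]$, and you claim it is bounded below by $\sqrt{(1-a)(1-b)}$ via an anti-comonotone coupling and Jensen. But the anti-comonotone coupling \emph{minimizes} $\bbE_Q[\sigma_X^W\sigma_Y^{W'}]$, and by Chebyshev's sum inequality its value is $\le\bbE[\sigma_X^W]\,\bbE[\sigma_Y^W]\le\sqrt{(1-a)(1-b)}$, with equality only when $\sigma_X^W$ and $\sigma_Y^W$ are almost surely constant. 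For a non-degenerate $P_W$ (e.g., $\sigma_X^W=\sigma_Y^W$ two-valued with $\bbE[\sigma_X^{W,2}]=1-a$), the infimum is strictly smaller than $\sqrt{(1-a)(1-b)}$, so the claimed inequality fails. This is not an incidental slip: the freedom to use a ``bad'' coupling $Q_{WW'}$ is exactly why $\uGamma$ can lie strictly below the conjectured value, and why the matching converse is not known. Your Lagrangian/KKT step --- which is deferred and not carried out --- would have to overcome this, and there is no obvious fix (an entropy-power-inequality argument addresses slack in $H(XY|W)$, which goes the right direction anyway, not the coupling term). Beyond that, Theorem~\ref{thm:exact_sl} is proved only for finite alphabets; extending the $\uGamma$ converse to continuous alphabets is a nontrivial separate gap that the paper does not fill, and it is not automatic that the chain-rule-for-couplings single-letterization ``carries over'' without control of divergences like $D_\infty$ on unbounded supports.

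In short, the proposal correctly identifies the natural route ($T_\Ex=\tilT_\infty$ via Lemma~\ref{lem:regularity}, then lower-bound $\tilT_\infty$ by $\uGamma$ and evaluate $\uGamma$ for the Gaussian), and the computation of $\rvH_\infty$ via the optimal transport / Cauchy--Schwarz bound on $\bbE_{Q_{XY}}[XY]$ is correct. But the wrong-direction bound on $\inf_Q\bbE_Q[\sigma_X^W\sigma_Y^{W'}]$ is a genuine error, and the remaining optimization is not a proof but a restatement of the open problem. As written, this does not resolve the conjecture.
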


\chapter{Approximate and Exact Channel Synthesis}
\label{ch:ecs}

How much information is required to create correlation {\em remotely}? How much interaction is necessary to create such correlation? These questions form the basis of this section. This setup is depicted in Fig.~\ref{fig:dcs}. It shows that an observer or encoder observes a sequence of i.i.d.\ random variables $X^n\sim\pi_X^n$ and describes it using a bit string with a certain rate $R$ to the decoder which itself produces another sequence $Y^n$. It is the hope that even though the encoder and decoder are remotely located, they can leverage a source of shared randomness $K_n$ to reduce the rate of jointly synthesizing a random process $(X^n,Y^n)\sim \pi_X^n P_{Y^n|X^n}$ such that its joint distribution $\pi_X^n P_{Y^n|X^n}$ is close to (or exactly equal) to a target distribution $\pi_{XY}^n=\pi_X^n\pi_{Y|X}^n$. Since the $X$-marginals of $\pi_{XY}^n$ and $\pi_X^n P_{Y^n|X^n}$ are identical, the spotlight is then shone on the generated conditional distribution $P_{Y^n |X^n}$ that is mandated to be close  (or exactly equal) to the target conditional distribution or {\em channel} $\pi_{Y|X}^n$. For this reason, this problem is termed as the {\em distributed channel synthesis} or {\em communication complexity of correlation} problem and has been studied in  \cite{winter02compression,cuff13,Bennett02,bennett14quantum,Harsha10} among others. 

Aiding the reconstruction of the channel is a source of {\em shared}  or {\em common randomness}  which we denote by $K_n$ in Fig.~\ref{fig:dcs}. This random variable is uniformly distributed on the index set $\calK_n=[2^{nR_0}]$; equivalently it has rate $R_0$.  It can be seen that there is a tradeoff between $R_0$ and $R$. Indeed, generally the larger the amount of shared randomness $R_0$, the more resources the encoder and decoder jointly have, and consequently,  the rate of communication $R$ required for synthesizing $\pi_{Y|X}^n$ (exactly or approximately) is usually smaller. The purpose of this section is to quantify this tradeoff precisely.
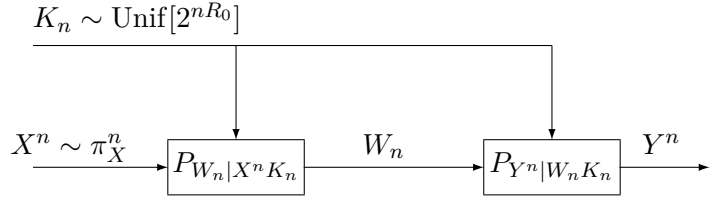
\begin{figure}[t]
\centering \setlength{\unitlength}{0.06cm} { \begin{picture}(150,40)
\put(-5,10){\vector(1,0){30}} \put(-10,13){%
\mbox{%
$X^{n}\sim\pi_{X}^{n}$%
}} \put(25,4){\framebox(30,12){$P_{W_{n}|X^{n}K_{n}}$}} \put(55,10){\vector(1,0){40}}
\put(68,13){%
\mbox{%
$W_{n}$%
}} \put(95,4){\framebox(30,12){$P_{Y^{n}|W_{n}K_{n}}$}} \put(125,10){\vector(1,0){20}}
\put(130,13){%
\mbox{%
$Y^{n}$%
}} \put(40,37){\vector(0,-1){21}} \put(-5,41){%
\mbox{%
$K_{n}\sim\mathrm{Unif}[2^{nR_{0}}]$%
}} \put(110,37){\vector(0,-1){21}} \put(-5,37){\line(1,0){115}}
 \end{picture}}
\caption{The   channel synthesis problem. The goal is to ensure that $P_{X^nY^n}$ is either approximately or exactly equal to $\pi_{XY}^n$.}
\label{fig:dcs}
\end{figure}

In the spirit of the previous sections, we study the problems of {\em approximately} and  {\em exactly} synthesizing the ($n$-fold product of the) target channel $\pi_{Y|X}^n$. The approximate version consists in quantifying the tradeoff between $R$ and $R_0$ such that the TV distance between $P_{X^nY^n}$ and $\pi_{XY}^n$ converges to zero as the blocklength $n$ increases without bound. This problem was studied by \citet{Bennett02}, \citet{winter02compression}, \citet{cuff13}, and \citet{bennett14quantum} among others. In particular, \citet{cuff13} showed that if $R_0=0$, then the minimum amount of communication rate required for TV-approximate synthesis is $R=C_\Wyner(\pi_{XY})$. In essence, when there is no common randomness, the problem of channel synthesis reduces to the distributed source simulation problem (Section~\ref{sec:dist_sim}). On the other hand, if $R_0 =\infty$, the corresponding minimum amount of rate is $R=I(X;Y)$. See Table~\ref{tab:ecs}. Thus by varying $R_0$, one traces out a tradeoff curve that interpolates between two familiar notions of correlation, namely Wyner's common information and the mutual information.  We elaborate on this in Section~\ref{sec:approxCS}.

We are also concerned with synthesizing the channel $\pi_{Y|X}$ {\em exactly} using {\em variable-length} codes. This problem was also studied in several works, including by \citet{Bennett02}, \citet{Harsha10} and  \citet{LiElgamal2018}. \citet{Bennett02} showed that when there is unlimited shared randomness, the minimum rate of communication is $I(X;Y)$. At the other extreme, if there is no shared randomness, the problem of exact channel synthesis reduces to  the exact common information problem. From Section~\ref{ch:exact}, we saw that for the DSBS, exact channel synthesis (with a uniform source $X$)
requires a strictly larger communication rate $\tilT_\infty(\pi_{XY})=T_\Ex(\pi_{XY})$ compared to  that required
for the TV-approximate version $\tilT_1(\pi_{XY})=C_\Wyner(\pi_{XY})$ (Theorem~\ref{thm:tv_ci}). These results are also summarized in Table~\ref{tab:ecs}.

\begin{table}[t]
\caption{Summary of results for the minimum communication rate for the extreme cases of the common randomness (CR) rate $R_0=0$ and $R_0=\infty$}
\label{tab:ecs}
\centering{\small
\begin{tabular}{|l|c|c|}
\hline
\diagbox{Syntheses}{CR Rate}                         & $R_0=\infty$ & $R_0=0$                \\ \hline \hline 
TV Approx.\ Synthesis & $I(X;Y)$ \cite{Bennett02, winter02compression, cuff13}      & $C_{\Wyner}(\pi_{XY})$ \cite{cuff13}  \\ \hline
Exact Synthesis          & $I(X;Y)$  \cite{Bennett02}   & $T_{\Ex}(\pi_{XY})$ \cite{KLE2014}    \\ \hline
\end{tabular}}
\end{table}

In this section, we are concerned with {\em refinements} to these extreme cases. Some results in the literature are worth highlighting. 
\citet{Harsha10}  used
a rejection sampling scheme to study the one-shot version of
exact simulation for the discrete source $(X, Y )$. The authors showed that the
number of bits of the shared randomness can be limited to
$O(\log \log |\calX| + \log |\calY|)$ if the expected description length of $X$ is
increased by $O(\log (I (X; Y ) + 1)+\log \log |\calY|)$ bits from the
mutual information lower bound $I (X; Y )$. \citet{LiElgamal2018} showed that if
the expected description length is increased by $\log(I (X; Y )+
1) + 5$ bits from $I (X; Y )$, then the number of bits of 
shared randomness can be upper bounded by $\log(|\calX|(|\calY|-1)+2)$. 
This section is   concerned with the fundamental limits of the amount of shared randomness when the sequence of communication rates
is required to approach the minimum rate $I (X; Y )$ {\em only
asymptotically} as $n\to\infty$. In this case, what is the minimum
amount of shared randomness required to realize exact synthesis? \citet{bennett14quantum} conjectured that an exponential number
of bits (and hence an infinite rate) of shared randomness
is necessary. This was disproved by \citet{Harsha10} and \citet{LiElgamal2018} where finite bounds on the rate were established. This section, and in particular Section~\ref{sec:singleletterCS}, surveys advances on this question and provides the best known bounds on the minimum amount of shared randomness in Section~\ref{sec:tradeoff_rates}. We supplement our discussions with numerical examples using the DSBS and the bivariate Gaussian source.

Besides the works surveyed above, local TV-approximate
simulation of a channel was studied by \citet{Ste96}.  TV-approximate simulation of a ``bidirectional''
channel via interactive communication was studied by \citet{YGA15}. Both the exact and TV-approximate versions
of the simulation of a channel over another noisy channel were
studied by \citet{Haddadpour17}. In particular, \cite{Haddadpour17} addressed
the case of exact simulation of a binary symmetric channel
over a binary erasure channel. The relationship between
the problem of exact channel simulation over another channel
and the problem of zero-error capacity was studied by
\citet{Cubitt11}.

\section{Approximate Channel Synthesis} \label{sec:approxCS}
In this section, we set the stage by describing the problem of approximate channel synthesis. The problem is depicted in Fig.~\ref{fig:dcs} in which the encoder provides a description of   the source sequence $X^n\sim\pi_{X}^n$ at a certain rate $R$. The rate-$R$ description, also known as the {\em message}, is denoted as $W_n$. A rate-$R_0$ random variable $K_n$, uniformly distributed on $\calK_n$, represents  {\em common randomness} available to {\em both} the encoder and decoder. The decoder generates a sequence $Y^n$ based on the message $W_n$ and the common randomness~$K_n$. 

The following definition is parallel to   Definition~\ref{def:wyner_code} for fixed-length distributed source simulation codes. 
\begin{definition}
An {\em $(n,R,R_0)$-fixed-length   channel synthesis code} consists of a pair of random mappings  $P_{W_n | X^n K_n}\in\calP(\calW_n | \calX^n\times \calK_n)$  and $P_{Y^n  | W_n  K_n}\in\calP(\calY^n | \calW_n\times \calK_n)$ such that 
\begin{equation}
\frac{1}{n}\log |\calW_n|\le   R\quad\mbox{and}\quad \frac{1}{n}\log |\calK_n|\le R_0.
\end{equation}
These two mappings are   known as the {\em encoder} and {\em decoder} respectively.
\end{definition}
Given   a code $(P_{W_n | X^n K_n},P_{Y^n  | W_n  K_n})$, the joint distribution of the message $W_n$ and output $Y^n$ given $(X^n, K_n)$ is  
\begin{equation}
P_{Y^n W_n|X^n K_n} = P_{Y^n  | W_n  K_n}  P_{W_n | X^n K_n}. \label{eqn:syn_exact0}
\end{equation}
The joint distribution of all the random variables $(X^n,Y^n, W_n, K_n)$ is 
\begin{equation}
P_{X^nY^n W_n K_n} =P_{Y^n W_n|X^n K_n}P_{X^nK_n}, 
\end{equation}
where, by definition, 
\begin{equation}
P_{X^nK_n}(x^n,k)=\frac{\pi_X^n(x^n)}{|\calK_n|}  \quad\mbox{for all} \;\, (x^n ,k)\in \calX^n\times  \calK_n.
\end{equation}
 Given a code, the {\em synthesized distribution} is
\begin{equation}
P_{X^nY^n} (x^n,y^n ) := \sum_{(w,k) \in \calW_n\times\calK_n} P_{X^nY^n W_n K_n}(x^n,y^n,w,k). \label{eqn:syn_exact}
\end{equation}
\begin{definition} \label{def:tv_approx_syn}
The pair $(R,R_0) \in \bbR_+^2$ is said to be {\em achievable for synthesizing the channel $\pi_{Y|X}$ with input $\pi_X$} if there exists a sequence of $(n,R,R_0)$-fixed-length   channel synthesis codes such that the TV distance between the synthesized distribution in \eqref{eqn:syn_exact} and the target distribution $\pi_{XY}^n$ vanishes, i.e.,
\begin{equation}
\lim_{n\to\infty}\left| P_{X^nY^n}-\pi_{XY}^n \right|=0.
\end{equation}
Define the {\em optimal rate region} $\calT(\pi_{XY}) \subset\bbR_+^2$ to be the closure of the set of achievable rate pairs $(R, R_0)$ for synthesizing $\pi_{Y|X}$ with input~$\pi_X$. 
\end{definition}
We remark that this definition is generally more stringent than the analogous one for distributed source synthesis in Definition~\ref{def:TVCI} as we only require that the TV distance vanishes. In contrast, in Definition~\ref{def:TVCI}, the TV distance is only required to be asymptotically   bounded by $\eps \in [0,1)$. To  state  the next result succinctly, let us define the following  set:
\begin{align}
\hspace{-.2in}\calC_{\Wyner}(\pi_{XY}) :=\bigcup_{\substack{ P_W P_{X|W} P_{Y|W} : \\ P_{XY}=\pi_{XY} }} \left\{  (R, R_0)   \, :  \, \parbox[c]{1.4 in}{$\hspace{.95cm} R\ge I(X;W)$  \vspace{0.03 in}\\ $R+R_0\ge I(XY;W)$ }  \right\}.  \label{eqn:calC}
\end{align}
Here, just like in~\eqref{eqn:wyner_sim} for Wyner's common information, the union runs over all triples of random variables $(X,W,Y)$ such that $X-W-Y$ forms a Markov chain in this order and $P_{XY}=\pi_{XY}$. To exhaust the rate region, it suffices to take $|\calW|\le|\calX||\calY|+1$.
 \citet{cuff13} proved the following fundamental result. 
\begin{theorem}\label{thm:dist_chsyn}
For any joint distribution $\pi_{XY}$ defined on a finite alphabet $\calX\times\calY$, 
\begin{equation}
\calT(\pi_{XY}) =\calC_{\Wyner}(\pi_{XY}) .
\end{equation}
\end{theorem}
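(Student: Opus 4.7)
The plan is to establish the two inclusions $\calT(\pi_{XY}) \supseteq \calC_{\Wyner}(\pi_{XY})$ (achievability) and $\calT(\pi_{XY}) \subseteq \calC_{\Wyner}(\pi_{XY})$ (converse). Both follow the architecture of the distributed source simulation proof (Theorem~\ref{thm:wynerCI_sim}) but must be enhanced to handle the presence of the common randomness $K_n$ and the input-distribution constraint $P_{X^n}=\pi_X^n$.

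For achievability, fix any $P_W P_{X|W} P_{Y|W}$ satisfying the Markov chain $X-W-Y$ and $P_{XY}=\pi_{XY}$, and fix $(R,R_0)$ strictly inside $\calC_{\Wyner}(\pi_{XY})$. I would use a two-stage random code: generate $2^{n(R+R_0)}$ codewords $W^n(w,k)$ i.i.d.\ from $P_W^n$, doubly indexed by $(w,k)\in [2^{nR}]\times[2^{nR_0}]$. Given $K_n=k$ and $X^n$, the encoder applies Cuff's \emph{likelihood encoder}: draw $W_n=w$ with probability proportional to $P_{X|W}^n(X^n\mid W^n(w,k))$; the decoder then outputs $Y^n\sim P_{Y|W}^n(\cdot\mid W^n(W_n,K_n))$. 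Analysis rests on applying the soft-covering lemma (Lemma~\ref{lem:soft-covering}) twice. First, the condition $R+R_0>I(W;XY)$ guarantees that the induced joint distribution of $(W^n(W_n,K_n),X^n,Y^n)$ is TV-close to $P_{WXY}^n$, treating the full codebook $\{W^n(w,k)\}$ as a single soft-covering code for the joint target $P_{WXY}=P_W P_{X|W} P_{Y|W}$. Second, the condition $R>I(X;W)$ guarantees that, for each \emph{fixed} realization of $K_n=k$, the sub-codebook $\{W^n(w,k):w\in[2^{nR}]\}$ alone soft-covers the $X$-marginal $\pi_X^n$, so the induced $X^n$-distribution is TV-close to $\pi_X^n$ uniformly in $k$. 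Combining these via the triangle inequality for TV gives $|P_{X^nY^n}-\pi_{XY}^n|\to 0$ in expectation over the codebook, and one extracts a deterministic codebook by a standard expurgation argument. Taking unions over $P_W P_{X|W} P_{Y|W}$ closes the achievable region.

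For the converse, let $(R,R_0)$ be an achievable pair with a sequence of codes satisfying $|P_{X^nY^n}-\pi_{XY}^n|\to 0$. The standard identification is $W_i:=(W_n,K_n,X^{i-1},Y_{i+1}^n)$, paired with a time-sharing variable $J\sim\mathrm{Unif}[n]$ independent of everything else, and $X:=X_J$, $Y:=Y_J$, $W:=(W_J,J)$. The two key bounds are
\begin{align}
nR &\ge H(W_n)\ge H(W_n|K_n)\ge I(W_n;X^n|K_n),\\
n(R+R_0) &\ge H(W_n,K_n)\ge I(W_n,K_n;X^nY^n).
\end{align}
A routine chain-rule single-letterization, exploiting the i.i.d.\ structure of $X^n\sim\pi_X^n$ and the independence of $K_n$ from $X^n$, yields $I(W_n;X^n|K_n)\ge nI(X;W)$ and $I(W_n,K_n;X^nY^n)\ge nI(W;XY)$ \emph{if} one may replace the induced joint law $P_{X_JY_J|W}$ by the desired $P_{X|W}P_{Y|W}$ and the marginal $P_{X_JY_J}$ by $\pi_{XY}$. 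This replacement is justified by the TV constraint together with the continuity of entropy on finite alphabets (the Alicki--Fannes or analogous bound). Finally, one must verify the Markov chain $X-W-Y$ on the limiting single-letter distribution; this follows because in the finite-blocklength code $Y^n$ is generated from $(W_n,K_n)$ without further access to $X^n$, which forces $X_i-(W_n,K_n,X^{i-1},Y_{i+1}^n)-Y_i$ exactly, and this Markovity transfers to the single-letter limit. The cardinality bound $|\calW|\le|\calX||\calY|+1$ then follows from the standard support-lemma argument.

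The main obstacle is the converse, specifically the passage from the approximate joint law $P_{X^nY^n}$ to the exact target $\pi_{XY}^n$ in the single-letter mutual information bounds; one must invoke uniform continuity of $(P_{XY},P_{W|XY})\mapsto I(XY;W)$ on the compact simplex (which requires the finite-alphabet hypothesis) together with a careful argument that the Markov constraint is preserved in the limit. A secondary technical point in the achievability is ensuring that the two soft-covering statements hold \emph{simultaneously} for the same random codebook, which is handled by a union bound over the two bad events arising from the two applications of Lemma~\ref{lem:soft-covering}.
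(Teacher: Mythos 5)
Your achievability argument is essentially the paper's own sketch: a random doubly-indexed codebook drawn i.i.d.\ from $P_W^n$, the likelihood encoder (which the paper encodes implicitly via the induced conditional $Q_{Y^nW_n|X^nK_n}$), two invocations of the soft-covering lemma ($R+R_0>I(W;XY)$ for $|Q_{X^nY^n}-\pi_{XY}^n|\to 0$, and $R>I(X;W)$ for $|Q_{X^n|K_n=k}-\pi_X^n|\to 0$ uniformly in $k$), and the TV triangle inequality. One inaccuracy: you claim the joint law of $(W^n(W_n,K_n),X^n,Y^n)$ is TV-close to $P_{WXY}^n$, which is false since there are only $2^{n(R+R_0)}\ll|\calW|^n$ codewords; soft-covering only gives, and you only need, closeness of the $(X^n,Y^n)$-marginal. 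For the converse, the paper provides none (it defers entirely to Cuff), so your sketch is extra. It follows the standard template, and although your auxiliary $(W_n,K_n,X^{i-1},Y_{i+1}^n)$ differs from the $(W_n,K_n,X^{i-1},Y^{i-1})$ that the paper itself uses in the analogous exact-synthesis converse in Section~\ref{sec:singleletterCS}, the two are interchangeable here: using the Markov chain $X^n-(W_n,K_n)-Y^n$ and the chain rule, one has $\sum_i H(X_iY_i\mid W_n,K_n,X^{i-1},Y_{i+1}^n)=\sum_i H(X_iY_i\mid W_n,K_n,X^{i-1},Y^{i-1})=H(X^n\mid W_n,K_n)+H(Y^n\mid W_n,K_n)$, so the resulting single-letter bounds are identical. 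Your identification of the main difficulty---passing from $P_{X^nY^n}$ TV-close to $\pi_{XY}^n$ to the exact single-letter region via continuity of mutual information on the finite-alphabet simplex---and the observation that the Markov chain $X_i-W_i-Y_i$ holds \emph{exactly} because $Y^n$ is generated from $(W_n,K_n)$ alone, are both correct.
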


\begin{figure}[t]
\centering
\begin{overpic}[width=.65\textwidth]{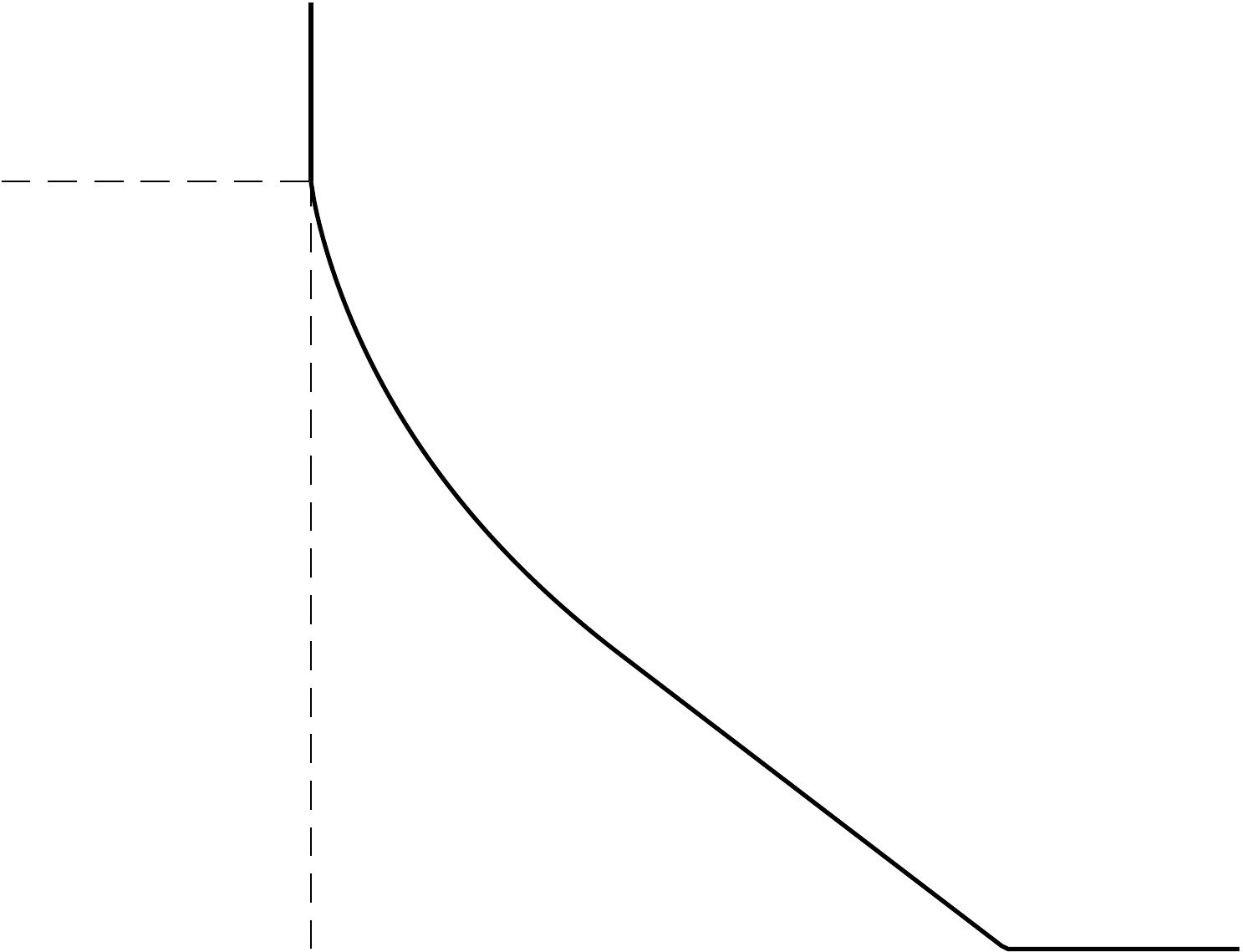}
\put(100,4){$R$}
\put(-20,61.5){$H(Y|X)$}
\put(-8,75){$R_0$}
\put(34,10){$I(X;Y)$}
\put(25,0){\circle*{2}}
\put(82,0){\circle*{2}}
\put(82,20){\vector(0,-1){18}}
\put(33,9){\vector(-1,-1){7}}
\thicklines
\put(-3,0){\vector(1,0){106}}
\put(0,-3){\vector(0,1){80}}

\put(71,22){$C_{\Wyner}(\pi_{XY})$}
{\Large\put(70,60){$\calC_{\Wyner}(\pi_{XY})$}}
\vspace{.1in}
\end{overpic}
\caption{A schematic of the region $\calC_{\Wyner}(\pi_{XY})$ defined in \eqref{eqn:calC}}
\label{fig:dist_chsyn}
\end{figure}


 Let us examine the extreme points of the region $\calC_{\Wyner}(\pi_{XY})$. When there is no common randomness, i.e., $R_0=0$, the second inequality in~\eqref{eqn:calC} dominates and Theorem~\ref{thm:dist_chsyn} says that rate of communication needs to be at least $I(XY;W)$ where the joint distribution $P_{W} P_{X|W}P_{Y|W}$ satisfies $P_{XY}=\pi_{XY}$. This rate is precisely Wyner's common information $C_{\Wyner}(\pi_{XY})$. On the other hand, when $R_0 =\infty$, the second inequality is inactive and we can easily see that the minimum communication rate is $R=I(X;Y)$. This can be rigorously justified as follows.  By the data processing inequality for the mutual information and the Markov chain $X-W-Y$, we have $I(X;W)\ge I(X;Y)$. This inequality can be met with equality by choosing $W=Y$. Furthermore, this choice implies that 
\begin{align}
 R_0 &=I(XY;W )- R = I(XY ;Y) -  R \label{eqn:R0_req0}\\
 &= H(Y) -  I(X;Y)=  H(Y|X) \label{eqn:R0_req}
 \end{align} 
 is a sufficient common randomness rate for achieving $R=I(X;Y)$. A schematic of the region $\calC_{\Wyner}(\pi_{XY})$ is shown in Fig.~\ref{fig:dist_chsyn}. Hence, we see that the approximate   channel synthesis problem provides us with a tuning knob $R_0$   to obtain a continuum of common information measures that interpolate from the mutual information to Wyner's common information. 
 
 We now devote the final paragraphs of this section to sketch the achievability proof of Theorem~\ref{thm:dist_chsyn}. The main idea is to invoke the TV distance version of the soft-covering lemma (cf.\ \eqref{eqn:soft-covering-tv} in Lemma~\ref{lem:soft-covering} and Section~\ref{sec:ach_tv}) multiple times, together with  some properties of the TV distance. 
 
 We proceed by a  random selection  (random coding) argument.  Fix any distribution $P_W P_{X|W} P_{Y|W}$ such that $P_{XY} =\pi_{XY}$. Randomly and independently generate a codebook $\scC_{n} = \{W^n(m, k): m \in \calW_n, k \in \calK_n\}$ where $\log |\calW_n|=nR$ and $\log |\calK_n|=nR_0$ and where each codeword  $W^n(m,k)$ is generated independently from the $n$-fold product distribution $P_W^n$.  Using $\scC_n$, define the  (random) distribution 
 \begin{align}
 Q_{X^n Y^n W_n K_n} (x^n,y^n,m,k) = \frac{ P_{X|W}^n (x^n | W^n(m,k)) P_{Y|W}^n(y^n | W^n(m,k))}{2^{n(R+R_0)}} .
 \end{align}
Based on $Q_{X^n Y^n W_n K_n}$, define the {\em synthesized distribution} as 
\begin{equation} 
P_{X^n Y^n W_n K_n} (x^n,y^n,m,k)= \frac{\pi_X^n(x^n) Q_{Y^n W_n|X^nK_n}(y^n,m | x^n,k)}{2^{nR_0}} . \label{eqn:syn_exact2}
\end{equation}
This distribution satisfies all the properties in~\eqref{eqn:syn_exact0}--\eqref{eqn:syn_exact}. 
 
 By the soft-covering lemma for the TV distance, if 
\begin{equation}
 R+R_0>I_P(XY;W),  \label{eqn:rate_constraint1}
 \end{equation} 
then the expectation of the  TV distance between $Q_{X^nY^n}$ and $\pi_{XY}^n$ vanishes, i.e., 
 \begin{equation}
\lim_{n\to\infty} \bbE\Big[\big| Q_{X^nY^n}-\pi_{XY}^n \big|\Big]=0. \label{eqn:tv1}
 \end{equation}
 Unfortunately, $Q_{X^nY^n}$ is not the synthesized distribution $P_{X^nY^n}$ so we must do a little more. Applying  the soft-covering lemma for the TV distance again, we see that if 
\begin{equation}
 R>I_P (X;W), \label{eqn:rate_constraint2}
 \end{equation} 
 then for all $k \in\calK_n$, 
 \begin{equation}
\lim_{n\to\infty} \bbE\Big[\big|Q_{X^n|K_n=k }-\pi_X^n \big|\Big] =0.
 \end{equation}
 Consequently,  by invoking the definition of the TV distance, 
 \begin{equation}
 \lim_{n\to\infty} \bbE\Big[\big|Q_{X^n K_n  }-\pi_X^n Q_{K_n} \big|\Big] =0, \label{eqn:tv2}
 \end{equation}
 where $Q_{K_n}=\mathrm{Unif}[2^{nR_0}]$. Now, we compare the synthesized distribution to the target distribution as follows
 \begin{align}
 &\big|P_{X^n Y^n}-\pi_{XY}^n  \big| \nn\\*
 & \le  \big|P_{X^n Y^n  }-Q_{X^nY^n} \big|+\big|Q_{X^nY^n}- \pi_{XY}^n  \big| \label{eqn:tri}\\
 & \le  \big|P_{X^n Y^n W_n K_n  }-Q_{X^nY^n W_n K_n} \big|+\big|Q_{X^nY^n}- \pi_{XY}^n  \big| \label{eqn:tv_common}\\
  & =   \big|P_{X^n  K_n  }-Q_{X^n  K_n} \big|+\big|Q_{X^nY^n}- \pi_{XY}^n  \big| \label{eqn:tv_common2}\\
    & =   \big|\pi_X^n Q_{K_n} -Q_{X^n  K_n} \big|+\big|Q_{X^nY^n}- \pi_{XY}^n  \big|, \label{eqn:tv_common3} 
 \end{align}
 where \eqref{eqn:tri} follows from the triangle inequality for the TV distance, \eqref{eqn:tv_common} follows from the fact that the TV distance between joint distributions is at least as large as the TV distance  between marginal distributions, \eqref{eqn:tv_common2} follows from the fact that $P_{Y^n W_n |X^n K_n}=Q_{Y^n W_n |X^n K_n}$ by the construction in~\eqref{eqn:syn_exact2} and finally, \eqref{eqn:tv_common3} follows from the definition of $P_{X^n Y^n W_n K_n} $ in \eqref{eqn:syn_exact2}. We can  now take expectations on both sides of  the above chain of inequalities.  The expectations of both terms in~\eqref{eqn:tv_common3} vanish due to~\eqref{eqn:tv1} and~\eqref{eqn:tv2}, which means that the synthesized distribution $P_{X^n Y^n}$ is arbitrarily close in TV distance to the target distribution  $\pi_{XY}^n$ as $n\to\infty$ if \eqref{eqn:rate_constraint1} and \eqref{eqn:rate_constraint2} are satisfied. 
\section{Exact Channel Synthesis} \label{sec:exactCS}
In this section, we consider an exact synthesis counterpart to that considered in Section~\ref{sec:approxCS}. That is, we require that the decoder in Fig.~\ref{fig:dist_chsyn} outputs a sequence of random variables $Y^n$ whose joint distribution with the source sequence $X^n$ is {\em exactly} $\pi_{XY}^n$. Just as we discussed in Section~\ref{ch:exact} on exact common information, to ensure exact reconstruction, one has to be given the freedom to use {\em variable-length} codes. In this channel synthesis setting, the notion of variable-length codes is parallel to that in Section~\ref{ch:exact} albeit slightly  more involved due to the presence of the common randomness $K_n$.  Our objective is to compare and contrast the optimal rate regions for approximate and exact reconstructions of the channel $\pi_{Y|X}$. 

Formally, let the alphabet of the common randomness $K_n$ be $\calK_n = [2^{nR_0}]$. In other words, $K_n$ can be represented by $nR_0$ bits and this length is kept {\em fixed}. The length that is allowed to vary  is that of $W_n \sim P_{W_n | X^n K_n}(\cdot| x^n,k)$ whose alphabet we denote by $\calW_n$. This alphabet, without loss of generality, can be regarded as a subset of $\bbN$. We now consider a set
of source codes  $\mathbf{f}=\{f_k : k\in\calK_n\}$, where each element of~$\mathbf{f}$ is a prefix-free code~\cite{Cov06}  $f_k:\calW_n\to\{0,1\}^*$ indexed by $k\in\calK_n$. Then for each message-common randomness pair $(w,k)\in\calW_n\times\calK_n$, and the set of codes $\mathbf{f}$, let $\ell_{\mathbf{f}}(w|k)$ denote the  {\em length} of the codeword $f_k(w)$ (see Example~\ref{ex:length}) where $f_k$ is the $k^{\mathrm{th}}$ component of $\mathbf{f}$. 

\begin{definition} \label{def:length_ecs}
The {\em expected codeword length} $L_{\mathbf{f}}$  of a code $\mathbf{f}=\{f_k : k\in\calK_n\}$ for compressing the source $W_n$ given $K_n$  is 
\begin{equation}
\!L_{\mathbf{f}}(W_n|K_n) \!=\!\bbE\big[\ell_{\mathbf{f}}(W_n|K_n)\big] \!= \!\sum_{(w,k)\in\calW_n\times\calK_n} \!\! P_{W_nK_n}(w,k) \ell_{\mathbf{f}}(w|k), 
\end{equation}
where the joint distribution between the message and  uniformly distributed common randomness (i.e., $P_{K_n}(k)=|\calK_n|^{-1}$ for all $k\in\calK_n$) is
\begin{equation}
P_{W_n K_n}(w,k)=\sum_{x^n\in\calX^n}\frac{1}{|\calK_n|}\pi_{X}^n (x^n)P_{W_n|X^n K_n}(w | x^n,k).
\end{equation}
\end{definition}
Note that if $\calK_n=\emptyset$, this definition reduces to that in Definition~\ref{def:length} for the exact common information problem. 

\begin{definition} \label{def:vlcode_ecs}
An {\em $(n,R, R_0)$-variable-length channel simulation code} $(\mathbf{f},P_{W_n|X^n K_n},P_{Y^n | W_n K_n})$ consists of 
\begin{itemize}
\item A set of prefix-free source codes $\mathbf{f}= \{f_k:\calW_n\to\{0,1\}^*\}_{k\in\calK_n}$;
\item A pair of random   mappings   $P_{W_n|X^n K_n}\in\calP( \calW_n |\calX^n\times\calK_n)$ and  $P_{Y^n | W_n K_n} \in\calP( \calY^n |\calW_n\times\calK_n)$ called the {\em encoder} and {\em decoder} respectively; 
\end{itemize}
such that the per-symbol expected codeword length 
\begin{equation}
\frac{1}{n}  L_{\mathbf{f}}(W_n |K_n) \le R, \label{eqn:expected_length_ecs}
\end{equation}
and the rate of the common randomness $|\calK_n|$ satisfies
\begin{equation}
\frac{1}{n} \log|\calK_n| \le R_0. \label{eqn:cr_size}
\end{equation}
\end{definition}
By the variable-length nature of the code, $W_n$ can be transmitted to the decoder error-free. The {\em synthesized channel} is then  given by 
\begin{equation}
P_{Y^n|X^n}(y^n|x^n) \!= \!\sum_{ (w,k) \in \calW_n\times\calK_n } \!\frac{P_{W_n|X^n K_n}(w|x^n, k) P_{Y_n|W_n K_n} (y^n|w,k)}{|\calK_n|}. \label{eqn:syn_con_dis}
\end{equation}
In the exact channel synthesis problem we consider in this section, $P_{Y^n |X^n}$ is required to be {\em exactly} equal to $\pi_{Y|X}^n$ for some large enough $n$.  It is worth noting that under the assumption that $K_n\sim\mathrm{Unif}(\calK_n)$, the synthesized channel depends only on the code $(P_{W_n|X^n K_n},P_{Y_n|W_n K_n})$ and not the   distribution $\pi_X^n$.  However, the code rate $R$ induced by a given channel simulation code (Definition~\ref{def:vlcode_ecs})  depends on $\pi_X$. 


\begin{definition} \label{def:ECS}
The pair $(R,R_0) \in \bbR_+^2$ is said to be {\em achievable for exactly  synthesizing the channel $\pi_{Y|X}$ with input $\pi_X$} if there exists an  $(n,R,R_0)$-variable-length   channel synthesis code  such that  the synthesized distribution in \eqref{eqn:syn_con_dis} and the target distribution $\pi_{Y|X}^n$ are equal, i.e.,  
\begin{equation}
 P_{Y^n|X^n}= \pi_{Y|X}^n  \quad\mbox{for some }   n\in\bbN. \label{eqn:exact_ecs}
\end{equation}
Define the {\em optimal rate region} $\calT_{\Ex}(\pi_{XY}) \subset\bbR_+^2$ to be the closure of the set of achievable rate pairs $(R, R_0)$ for exactly synthesizing $\pi_{Y|X}$ with input $\pi_X$. 
\end{definition}
The central goal of this section is to characterize $\calT_{\Ex}(\pi_{XY})$ for various sources $(X,Y)\sim\pi_{XY}=\pi_X \pi_{Y|X}$.

We first perform a simple observation that is parallel to that of Lemma~\ref{lem:common_ent} for the exact common information problem. Observe from the law of total expectation that $L_{\mathbf{f}}(W_n|K_n)  =\bbE[ \bbE[  \ell_{\mathbf{f}}(W_n|K_n)  | K_n ]]$. Hence, to minimize the expected codeword length $L_{\mathbf{f}}(W_n|K_n)$, it suffices to minimize $\bbE[  \ell_{\mathbf{f}}(W_n|K_n)  | K_n =k]$ for each $k \in\calK_n$. By applying Shannon's zero-error compression theorem for every $k$, we have the bounds
\begin{equation}
H(W_n|K_n=k)\le \bbE[  \ell_{\mathbf{f}}(W_n|K_n)  | K_n =k] < H(W_n|K_n=k)+1. \label{eqn:bounds_length_exp}
\end{equation}
Hence, by taking the expectation over $K_n$, for a set of optimal prefix-free codes $\mathbf{f}^*= \{f_k^* : k \in\calK_n\}$, one has 
\begin{equation}
H(W_n|K_n ) \le L_{\mathbf{f}^*}(W_n|K_n) < H(W_n|K_n )+1.
\end{equation}
Consequently, 
\begin{equation}
\lim_{n\to\infty} \frac{ L_{\mathbf{f}^*}(W_n|K_n) }{n}= \lim_{n\to\infty}\frac{H(W_n|K_n)}{n}. \label{eqn:limitKH}
\end{equation}
Hence, completely analogous to Lemma~\ref{lem:common_ent}, we have the following multi-letter characterization of $\calT_{\Ex}(\pi_{XY})$; this is due to the present authors~\cite{YuTan2020b}. 
\begin{lemma}\label{lem:common_ent_ecs}
The optimal rate region for the exact channel synthesis problem is 
\begin{equation}
\calT_{\Ex}(\pi_{XY}) \!=\!  \mathrm{Cl} \left(\bigcup_{n\in\bbN } \left\{  (R, R_0)   :   \parbox[c]{1.67 in}{$ \exists\, (P_{W_n|X^n K_n},P_{Y^n | W_n K_n})$   \vspace{0.03 in}\\ $P_{Y^n |X^n} =\pi_{Y|X}^n$  \vspace{0.03 in}\\ $R\ge \frac{1}{n} H(W_n|K_n)$  }  \right\} \right). 
\end{equation}
\end{lemma}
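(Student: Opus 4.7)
The plan is to establish the two inclusions ``$\subseteq$'' and ``$\supseteq$'' separately, both of which rest on the elementary bounds in~\eqref{eqn:bounds_length_exp} averaged over the common randomness $K_n$, equivalently on the fact that Shannon's zero-error source coding theorem applies conditionally on each realization of $K_n$. This is essentially the same averaging that already led to~\eqref{eqn:limitKH}, so the lemma should be viewed as a packaging of that observation into a single-letter-free rate region.

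For the converse inclusion $\calT_{\Ex}(\pi_{XY}) \subseteq \mathrm{RHS}$, I would begin with an arbitrary $(R, R_0) \in \calT_{\Ex}(\pi_{XY})$ and, by definition of the closure, produce a sequence of achievable pairs $(R_m, R_{0,m}) \to (R, R_0)$. Each $(R_m, R_{0,m})$ is attained by some $(n_m, R_m, R_{0,m})$-variable-length code realizing exact synthesis $P_{Y^{n_m}|X^{n_m}} = \pi_{Y|X}^{n_m}$. The lower bound $L_{\mathbf{f}_m}(W_{n_m}|K_{n_m}) \ge H(W_{n_m}|K_{n_m})$ obtained by averaging Kraft's inequality over $K_{n_m}$ yields $\tfrac{1}{n_m} H(W_{n_m}|K_{n_m}) \le R_m$. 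Thus each $(R_m, R_{0,m})$ lies in the set on the right-hand side with $n = n_m$, and passing to the limit together with closure places $(R, R_0)$ there as well.

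For the achievability inclusion $\mathrm{RHS} \subseteq \calT_{\Ex}(\pi_{XY})$, I would take any $(R, R_0)$ in the right-hand side. Since $\calT_{\Ex}(\pi_{XY})$ is closed, it suffices to show that $(R+\epsilon, R_0)$ is achievable for every $\epsilon > 0$. By hypothesis there exist an integer $n$ and a stochastic code $(P_{W_n|X^n K_n}, P_{Y^n|W_n K_n})$ satisfying exact synthesis with $\tfrac{1}{n} H(W_n|K_n) \le R$. Shannon's construction with per-$k$ codeword lengths $\lceil -\log P_{W_n|K_n}(\cdot|k)\rceil$ then yields a prefix-free family $\mathbf{f}^\ast = \{f_k^\ast\}_{k \in \calK_n}$ with $L_{\mathbf{f}^\ast}(W_n|K_n) < H(W_n|K_n) + 1$, hence per-letter rate strictly less than $R + 1/n$. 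To drive the overhead $1/n$ below any prescribed $\epsilon$, I would tensorize by concatenating $K$ independent copies of the scheme, producing a $(Kn, R + 1/(Kn), R_0)$-variable-length code; choosing $K$ large enough delivers $(R+\epsilon, R_0) \in \calT_{\Ex}(\pi_{XY})$.

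The main subtle point will be this tensorization step: one must verify that independent replication of the code over $K$ blocks preserves exact synthesis (the $K$-fold product of $\pi_{Y|X}^{n}$ equals $\pi_{Y|X}^{Kn}$), keeps the common-randomness rate at $R_0$ (the total randomness has size $|\calK_n|^K$, so $\tfrac{1}{Kn}\log|\calK_n|^K = \tfrac{1}{n}\log|\calK_n|$), and leaves $\tfrac{1}{Kn} H(W_n^K | K_n^K) = \tfrac{1}{n} H(W_n|K_n)$ unchanged so that only the additive Shannon overhead shrinks to zero. This additivity is immediate from the memoryless structure, but it is the step that bridges ``$(R, R_0)$ realizable for a single blocklength $n$'' with membership in the closed limiting set $\calT_{\Ex}(\pi_{XY})$.
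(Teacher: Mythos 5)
Your proposal is correct and matches the paper's (implicit) argument: the converse is Kraft's inequality averaged over $K_n$, and the achievability applies Shannon's conditional prefix code together with tensorization to drive the additive $+1$ overhead to zero — the paper handles this same overhead via the $n\to\infty$ limit in~\eqref{eqn:limitKH}, which your $K$-fold product scheme instantiates concretely (since a $K$-fold product code at blocklength $n$ is itself a valid code at blocklength $Kn$ in the union). One small exposition slip: in the achievability direction, after noting that $\calT_{\Ex}(\pi_{XY})$ is closed, you should first reduce from $(R,R_0)\in\mathrm{Cl}(\bigcup_n\cdots)$ to $(R,R_0)$ in the union itself before invoking ``by hypothesis there exist $n$ and a code with $\tfrac{1}{n}H(W_n|K_n)\le R$'' — a point in the closure need not admit such $n$ exactly; the fix is one line and the substance is unaffected.
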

Because of \eqref{eqn:limitKH}, the multi-letter expression presented in Lemma~\ref{lem:common_ent_ecs} does not depend on the set of prefix-free codes $\mathbf{f}$ and thus $\mathbf{f}$ may be omitted from Definition~\ref{def:vlcode_ecs} in our consideration of the optimal rate region (per Definition~\ref{def:ECS}). We notice that the limit of $H(W_n|K_n)/n$ can be interpreted as the  {\em conditional common entropy rate} of the process $\{W_n\}_{n\in\bbN}$ given another process $\{K_n\}_{n\in\bbN}$. While this lemma presents a characterization of $\calT_{\Ex}(\pi_{XY})$, it is far from explicit and intractable to calculate given a $\pi_{XY}$. In the following, we present alternative characterizations of and bounds on~$\calT_{\Ex}(\pi_{XY})$ that are  more explicit. 
\section{Multi-Letter Characterization for Exact Channel Synthesis} \label{sec:multiLetterECS}
In this section, we present an alternative multi-letter characterization in terms of the maximal cross-entropy defined (see \eqref{eqn:maximal_cross_ent0} in Definition~\ref{def:max_cross_ent}), which as we have seen from Sections \ref{ch:renyi} and \ref{ch:exact}, plays a crucial role in the characterization of fundamental limits of common information problems when {\em exact} reconstruction is required.  To do so, we define  $\ucalR(\pi_{XY} )$ to be the set of   rate pairs $(R, R_0)\in\bbR_+^2$ such that there exists a joint distribution $P_W P_{X |W} P_{Y |W}$ with $P_{XY}=\pi_{XY}$ and 
\begin{align}
R &\ge I(W;X )  \label{eqn:ecs_ml1}\\
 R_0+R &\ge  -H(XY;W) \!+\! \bbE_{P_W} \big[ \rvH_{\infty}( P_{X|W}, P_{Y|W} \|\pi_{XY}) \big].\label{eqn:ecs_ml2}
\end{align}
To exhaust the region $\ucalR(\pi_{XY})$, it suffices to take $|\calW|\le|\calX||\calY|+1$. The bound in \eqref{eqn:ecs_ml2} is analogous to the  upper pseudo-common information of order $\infty$ (Definition~\ref{def:pseudo_CI}). The following theorem is also parallel to~\eqref{eqn:multiletter_equiv} in Theorem~\ref{thm:equivalence} and is due to the present authors~\cite{YuTan2020b}.

\begin{theorem} \label{thm:ecs_ml}
For a source with distribution $\pi_{XY}$ defined on a finite alphabet $\calX\times\calY$,
\begin{equation}
\calT_{\Ex}(\pi_{XY}) = \mathrm{Cl}\Bigg( \bigcup_{n\in\bbN} \frac{1}{n} \ucalR  \big(\pi_{XY}^n\big) \Bigg).
\end{equation}
\end{theorem}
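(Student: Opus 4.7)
The plan is to prove Theorem \ref{thm:ecs_ml} by paralleling the equivalence-based strategy used for Theorem \ref{thm:equivalence}, lifting it from a scalar common-information rate to a two-dimensional rate region. I would introduce an intermediate region $\calT_\infty(\pi_{XY})$ defined as the set of $(R, R_0) \in \bbR_+^2$ achievable by fixed-length $(n, R, R_0)$-channel synthesis codes under the stringent R\'enyi-order-$\infty$ criterion $D_\infty(P_{X^n Y^n} \| \pi_{XY}^n) \to 0$, and then establish the chain of identities
\begin{equation*}
\calT_\Ex(\pi_{XY})\;=\;\calT_\infty(\pi_{XY})\;=\;\mathrm{Cl}\!\left(\bigcup_{n\in\bbN}\tfrac{1}{n}\,\ucalR(\pi_{XY}^n)\right).
\end{equation*}

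For the first equality, $\calT_\Ex\subseteq\calT_\infty$ is immediate since an exact code has $D_\infty=0$. For the reverse inclusion, I would transplant the mixture decomposition technique of Section \ref{sec:sketch_equiv} into the channel synthesis framework: given a fixed-length code satisfying $D_\infty(P_{X^nY^n}\|\pi_{XY}^n)\le\epsilon$, decompose $\pi_{XY}^n=2^{-\epsilon}P_{X^nY^n}+(1-2^{-\epsilon})P_{\hat X^n\hat Y^n}$ and use a single Bernoulli flag (independent of $K_n$) to signal which branch to invoke, falling back on raw lossless transmission of a residual sample at rate $\log(|\calX||\calY|)$ on the rare event. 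The per-symbol expected code length then increases by at most $\tfrac{1}{n}+(1-2^{-\epsilon})\log(|\calX||\calY|)$, which vanishes as $n\to\infty$ and $\epsilon\downarrow 0$, while the common-randomness rate $R_0$ is unaffected.

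For the second equality, the achievability direction ($\supseteq$) combines the one-shot soft-covering lemma (Lemma \ref{lem:one-shot-sc}) with the truncated i.i.d.\ construction of Section \ref{sec:ach_RCIc}. Fixing a distribution $P_W P_{X^n|W}P_{Y^n|W}$ attaining the $\ucalR(\pi_{XY}^n)$ bounds, I would treat the $n$-block as a super-symbol and draw a doubly-indexed codebook $\{W(j,k):j\in[2^{mnR}],\,k\in[2^{mnR_0}]\}$ of length $m$ from a truncated version of $P_W^m$, in the spirit of \eqref{eqn:trunc1}--\eqref{eqn:trunc3}. A likelihood encoder maps $(X^{mn},K)\mapsto J$ and the decoder passes $W(J,K)$ through $P_{Y^n|W}^m$. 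Applying Lemma \ref{lem:one-shot-sc} once at rate $R$ forces the induced conditional $P_{X^{mn}|K}$ to approach $\pi_X^{mn}$ in $D_\infty$ (requiring $R\ge I(W;X^n)/n$), and applying it a second time at rate $R+R_0$ forces the joint $P_{X^{mn}Y^{mn}}$ to approach $\pi_{XY}^{mn}$ in $D_\infty$ (requiring the $\rvH_\infty$-based bound); the truncation suppresses the type-overflow phenomenon of Section \ref{sec:type_over}, ensuring a pointwise multiplicative approximation on the entire support.

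The converse direction ($\subseteq$) is the most delicate step and adapts the converse of Theorem \ref{thm:RCI}(c) to the two-rate setting. Given a fixed-length code achieving $D_\infty\to 0$ at rates $(R,R_0)$, I would apply Lemma \ref{lem:lb_renyi} twice --- once with index $W_n$ under the distribution conditioned on $K_n$ to extract a bound on $R$, and once with the joint index $(W_n,K_n)$ to extract a bound on $R+R_0$ --- and then invoke the variational representation \eqref{eqn:var_Renyi} of the R\'enyi divergence to convert these divergence bounds into entropic and $\rvH_\infty$-based bounds. Setting the auxiliary $W:=(W_n,K_n)$, one verifies the Markov chain $X^n-W-Y^n$ and the marginal condition $P_{X^nY^n}=\pi_{XY}^n$, and the two resulting inequalities match precisely the two defining inequalities of $\ucalR(\pi_{XY}^n)$. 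Crucially, unlike the single-letter converse in Section \ref{sec:prf_rci} --- which relies on the chain rule for coupling sets (Lemma \ref{lem:coupling}) to break an $n$-letter coupling maximization into a product of single-letter ones --- the multi-letter form of Theorem \ref{thm:ecs_ml} keeps the $n$-letter $W$ intact, so no such decomposition is needed. I expect the main technical hurdle to be verifying that the support constraint $\supp(P_{X^n|W=w})\times\supp(P_{Y^n|W=w})\subseteq\supp(\pi_{XY}^n)$, forced almost surely by the $D_\infty$-vanishing condition, is correctly propagated through the variational argument so that the $\rvH_\infty$-term produced by the converse exactly recovers the one appearing in $\ucalR(\pi_{XY}^n)$, with any non-vanishing but sub-exponential slack absorbed into the closure operation.
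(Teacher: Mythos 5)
Your plan---introduce a fixed-length, $D_\infty$-criterion region $\calT_\infty(\pi_{XY})$ and prove $\calT_\Ex=\calT_\infty=\mathrm{Cl}(\bigcup_n\frac{1}{n}\ucalR(\pi_{XY}^n))$---is the route the paper takes, lifting the argument of Theorem~\ref{thm:equivalence} to a two-dimensional rate region, and the toolbox you deploy (soft-covering with a doubly indexed codebook, truncated i.i.d.\ coding, the Cuff-style two-stage use of Lemma~\ref{lem:one-shot-sc}, mixture decomposition from $\calT_\infty$ to $\calT_\Ex$) is correct. The observation that the multi-letter form lets you keep $W=(W_n,K_n)$ intact and dispense with the chain rule for coupling sets is also a genuine simplification relative to the single-letter converse of Theorem~\ref{thm:RCI}(c).

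The gap is the claim that ``$\calT_\Ex\subseteq\calT_\infty$ is immediate since an exact code has $D_\infty=0$.'' This conflates zero error with code-class compatibility: $\calT_\Ex$ is defined via \emph{variable-length} codes (prefix-free description of $W_n$ given $K_n$, rate measured by expected codeword length), while your $\calT_\infty$ is explicitly defined via \emph{fixed-length} codes. An exact variable-length code has $D_\infty=0$, but it is not a fixed-length code, so the rate pair does not thereby land in $\calT_\infty$. This inclusion is precisely the region analogue of the direction $\tilT_\infty\le T_\Ex$ in Section~\ref{sec:sketch_equiv}, and there it is the half that requires the super-code / R\'enyi-resolvability shaping construction (concatenate $m$ blocks, shape a uniform index to approximate $W_k^m$ in $D_\infty$, invoke data processing). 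In the channel-synthesis setting the shaping is more delicate still, since $W_n$ is a stochastic function of $(X^n,K_n)$ rather than a free variable drawn from a fixed $P_{W_n}$, so the shaping map must be coordinated conditionally through the shared $K_n^m$. This cannot be waved away: $\calT_\Ex\subseteq\calT_\infty$ is where the entire converse of the theorem lives in your chain $\calT_\Ex\subseteq\calT_\infty\subseteq\mathrm{Cl}(\bigcup_n\frac{1}{n}\ucalR(\pi_{XY}^n))$. Nor can it be bypassed by bounding $H(W_n|K_n)+\log|\calK_n|$ directly against $-H(X^nY^n|W)+\bbE_W[\rvH_\infty(P_{X^n|W},P_{Y^n|W}\|\pi_{XY}^n)]$ at a single $n$: that inequality need not hold for an arbitrary auxiliary $W$ achieving exact synthesis, because $\rvH_\infty$ is intrinsically a $D_\infty$-criterion quantity, and the two multi-letter formulas (Lemma~\ref{lem:common_ent_ecs} and Theorem~\ref{thm:ecs_ml}) only coincide asymptotically through the $D_\infty$-equivalence.
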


The intuition for the achievability part of this  result (i.e., that $\calT_{\Ex}(\pi_{XY})\supset \frac{1}{n}\ucalR  \big(\pi_{XY}^n\big)$ for any $n\in\bbN$) is similar to that sketched in Section~\ref{sec:intuition_ub} for the exact common information problem.  In particular, the constraint in~\eqref{eqn:ecs_ml2}  results from the fact that we can use the {\em pair} $(W_n, K_n) $ as the common randomness for the exact synthesis of $\pi_{XY}^n$. Recall that the exact common information is equal to the R\'enyi common information of order $\infty$ (Theorem~\ref{thm:equivalence}). Hence, we require $D_\infty( P_{X^nY^n}\|\pi_{XY}^n)$ to vanish. This is equivalent to 
\begin{equation}
\max_{(x^n,y^n) \in\supp(P_{X^n Y^n})} \frac{P_{X^nY^n}(x^n, y^n)}{\pi_{XY}^n(x^n, y^n)}  =1+o(1).
\end{equation}
 According to the discussion in Section~\ref{sec:intuition_ub}, for this condition to hold using truncated i.i.d.\ codes within a mixture decomposition framework, we need the  total rate of the available common randomness $R_0+R$ to satisfy~\eqref{eqn:ecs_ml2}. 
 The constraint that $R\ge I(W; X)$  in~\eqref{eqn:ecs_ml1} is  similar, albeit simpler. It is required to ensure that $\pi_X^n$ is close to $P_{X^n}$ in the sense that 
\begin{equation}
\max_{x^n \in\calT_\epsilon^{(n)}(\pi_X)} \frac{P_{X^n}(x^n)}{\pi_X^n(x^n)}=1+o(1).
 \end{equation} 
Putting these ideas together yields the fact that $\calT_{\Ex}(\pi_{XY})\supset \ucalR  (\pi_{XY})$. Using the above coding scheme and following steps similar to the approximate synthesis case (i.e., the proof of Theorem~\ref{thm:dist_chsyn}) on source blocks of length $n$ yields  that $\calT_{\Ex}(\pi_{XY})\supset \frac{1}{n}\ucalR  \big(\pi_{XY}^n\big)$ for all $n\in\bbN$, which is the achievability part of  Theorem~\ref{thm:ecs_ml}. 

\section{Single-Letter Bounds for Exact Channel Synthesis} \label{sec:singleletterCS}
In this section, we present single-letter  inner and outer bounds on the optimal rate region for exact channel synthesis $\calT_{\Ex}(\pi_{XY})$.  

To state the bounds succinctly, we present a definition that is analogous to $\ucalR(\pi_{XY})$ in~\eqref{eqn:ecs_ml1}--\eqref{eqn:ecs_ml2}. Let $\ocalR(\pi_{XY})$ be the set of   rate pairs $(R, R_0)\in\bbR_+^2$ such that there exists a joint distribution $P_W P_{X |W} P_{Y |W}$ with $P_{XY}=\pi_{XY}$ satisfying \eqref{eqn:ecs_ml1} and  additionally, 
\begin{align}
 \hspace{-.2in}
 R_0+ R & \ge  -H(XY;W) \nn\\*
  \hspace{-.2in}& \qquad+\inf_{\substack{Q_{WW'}\\ \in \calC(P_W,P_W)}}  \bbE_{Q_{WW'}}\big[ \rvH_{\infty}(P_{X|W},P_{Y|W'}\|\pi_{XY})\big]. \label{eqn:ecs_ml_u2}
\end{align}
This bound is analogous to the  lower pseudo-common information of order $\infty$ (Definition~\ref{def:pseudo_CI}). 
The difference between $\ocalR(\pi_{XY})$ and $\ucalR(\pi_{XY})$  is also similar  to the difference between $\uGamma(\pi_{XY})$ and $\oGamma(\pi_{XY})$ defined in~\eqref{eqn:uGamma_inf} and \eqref{eqn:Gamma_fn} respectively.  Clearly, $\ucalR(\pi_{XY})\subset\ocalR(\pi_{XY})$ and equality is achieved, for example, when every  point  on the boundary of $\ucalR(\pi_{XY})$  induces an  optimal coupling  in~\eqref{eqn:ecs_ml_u2} that is the equality coupling, i.e., $Q_{WW'}(w,w') = P_W(w) \bone\{w=w'\}$ for all $(w,w')\in\calW^2$. 

The following theorem, due to the present authors~\cite{YuTan2020b}, is analogous to Theorem~\ref{thm:exact_sl} for the exact common information problem.  
\begin{theorem}[Bounds on exact channel synthesis region] \label{thm:ecs_sl}
For a source with distribution $\pi_{XY}$ defined on a finite alphabet $\calX\times\calY$,
\begin{equation}
\ucalR(\pi_{XY})\subset \calT_{\Ex}(\pi_{XY}) \subset \ocalR(\pi_{XY})\cap \calC_{\Wyner}(\pi_{XY}).
\end{equation}
\end{theorem}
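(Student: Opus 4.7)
The proof splits into the inner bound $\ucalR(\pi_{XY}) \subset \calT_{\Ex}(\pi_{XY})$ and the outer bound $\calT_{\Ex}(\pi_{XY}) \subset \ocalR(\pi_{XY}) \cap \calC_{\Wyner}(\pi_{XY})$. The inner bound is immediate by specializing the multi-letter characterization of Theorem~\ref{thm:ecs_ml} to $n=1$. To establish $\calT_{\Ex}(\pi_{XY}) \subset \calC_{\Wyner}(\pi_{XY})$, I would convert any $(n, R, R_0)$-variable-length exact code into a fixed-length TV-vanishing code at rates arbitrarily close to $(R, R_0)$: concatenate $m$ independent copies of the exact code (each with fresh common randomness) to obtain an exact variable-length code for blocklength $mn$ with expected total length at most $mnR$, and by the law of large numbers the probability that the total length exceeds $mn(R+\eps)$ tends to zero as $m\to\infty$. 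Truncating at $mn(R+\eps)$ bits and applying any arbitrary fallback on the truncation event yields a fixed-length rate-$(R+\eps, R_0)$ code whose synthesized joint distribution is TV-close to $\pi_{XY}^{mn}$, so $(R+\eps, R_0) \in \calC_{\Wyner}(\pi_{XY})$ by Theorem~\ref{thm:dist_chsyn}; letting $\eps \downarrow 0$ closes the argument.

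The bulk of the work is $\calT_{\Ex}(\pi_{XY}) \subset \ocalR(\pi_{XY})$, obtained by single-letterizing the multi-letter expression of Theorem~\ref{thm:ecs_ml}. Fix any pair in $\frac{1}{n} \ucalR(\pi_{XY}^n)$ realized by $P_{V_n} P_{X^n|V_n} P_{Y^n|V_n}$ with $X^n - V_n - Y^n$ and $P_{X^nY^n} = \pi_{XY}^n$. Introduce a uniform time-sharing index $J \sim \mathrm{Unif}[n]$ independent of $(V_n, X^n, Y^n)$ and set $X := X_J$, $Y := Y_J$, and the auxiliary $W := (V_n, J, X^{J-1}, Y^{J-1})$. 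The Markov chain $X^n - V_n - Y^n$ implies that $Y^{J-1}$ drops out of $P_{X_J \mid V_n, X^{J-1}, Y^{J-1}, J}$ and $X^{J-1}$ drops out of $P_{Y_J \mid V_n, X^{J-1}, Y^{J-1}, J}$, giving $X - W - Y$; the i.i.d.\ structure of the source gives $P_{XY} = \pi_{XY}$. Standard chain-rule manipulations then yield $I(W; X) = \frac{1}{n} I(V_n; X^n) \le R$ and $H(XY \mid W) = \frac{1}{n} H(X^n Y^n \mid V_n)$, taking care of the mutual-information constraint and the entropy half of the rate-sum constraint.

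For the maximal cross-entropy term, decompose $\log(1/\pi_{XY}^n(x^n,y^n)) = \sum_i \log(1/\pi_{XY}(x_i,y_i))$ and apply the chain rule for coupling sets (Lemma~\ref{lem:coupling}) to restrict, for each value of $V_n$, the maximization defining $\rvH_\infty(P_{X^n|V_n}, P_{Y^n|V_n} \| \pi_{XY}^n)$ to product couplings $\prod_i Q_{X_i Y_i \mid X^{i-1} Y^{i-1} V_n}$ with factors in $\calC(P_{X_i \mid X^{i-1} V_n}, P_{Y_i \mid Y^{i-1} V_n})$. Each summand depends only on its own factor, so the maximum of the sum equals the sum of per-letter maxima; averaging over $V_n$ and $J$ gives
\begin{equation*}
\tfrac{1}{n}\bbE\big[\rvH_\infty(P_{X^n|V_n}, P_{Y^n|V_n}\|\pi_{XY}^n)\big] \ge \bbE_{P_W}\big[\rvH_\infty(P_{X|W}, P_{Y|W}\|\pi_{XY})\big].
\end{equation*}
The right-hand side coincides with the value attained in $\ocalR(\pi_{XY})$ by the equality coupling $Q_{WW'}(w,w') = P_W(w)\bone\{w = w'\}$, which upper-bounds the infimum over $\calC(P_W, P_W)$; combined with the entropy identity and the $R \ge I(W;X)$ bound, this places $(R, R_0)$ in $\ocalR(\pi_{XY})$, and taking the closure over $n$ completes the inclusion.

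The main obstacle I anticipate is the careful justification of the chain-rule-based restriction to product couplings (so that the maximum of the sum truly decomposes into the sum of per-letter maxima) together with the Markov-chain reductions that identify $P_{X|W}$ and $P_{Y|W}$ with the desired conditional distributions $P_{X_J \mid V_n X^{J-1}}$ and $P_{Y_J \mid V_n Y^{J-1}}$; the argument closely follows, but extends, the converse of Theorem~\ref{thm:RCI}(c) sketched in Section~\ref{sec:prf_rci}, specialized here to $s = \infty$ and adapted to the simultaneous mutual-information and rate-sum constraints characteristic of the exact channel synthesis problem.
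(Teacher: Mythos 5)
Your inner bound and your operational argument for $\calT_{\Ex}(\pi_{XY})\subset\calC_{\Wyner}(\pi_{XY})$ are both fine; the latter is a genuinely different and more elementary route than the paper's, which instead applies the chain of inequalities in~\eqref{eqn:lower_bd_Hinfty} to conclude $\ucalR(\pi_{XY}^n)\subset\calC_{\Wyner}(\pi_{XY}^n)$ and then invokes Cuff's tensorization $\tfrac{1}{n}\calC_{\Wyner}(\pi_{XY}^n)=\calC_{\Wyner}(\pi_{XY})$. Your concatenate--LLN--truncate reduction from a variable-length exact code to a fixed-length TV-vanishing code is a perfectly valid alternative.

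However, your single-letterization for the inclusion $\calT_{\Ex}(\pi_{XY})\subset\ocalR(\pi_{XY})$ has a genuine gap. After restricting the maximization defining $\rvH_\infty(P_{X^n|V_n},P_{Y^n|V_n}\|\pi_{XY}^n)$ to product couplings $\prod_i Q_{X_iY_i\mid X^{i-1}Y^{i-1}V_n}$ via Lemma~\ref{lem:coupling}, you assert that ``each summand depends only on its own factor, so the maximum of the sum equals the sum of per-letter maxima'' and conclude
\begin{equation*}
\tfrac{1}{n}\bbE\big[\rvH_\infty(P_{X^n|V_n},P_{Y^n|V_n}\|\pi_{XY}^n)\big]\ \ge\ \bbE_{P_W}\big[\rvH_\infty(P_{X|W},P_{Y|W}\|\pi_{XY})\big].
\end{equation*}
This is false as stated. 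Under a product coupling the $i$-th summand is
$\bbE_{Q_{X^{i-1}Y^{i-1}\mid V_n}}\!\big[\bbE_{Q_i}[\log\tfrac{1}{\pi_{XY}(X_i,Y_i)}]\big]$,
and the outer expectation is taken with respect to the marginal $Q_{X^{i-1}Y^{i-1}\mid V_n}$ \emph{induced by the earlier factors of the coupling}, not with respect to the true $P_{X^{i-1}Y^{i-1}\mid V_n}=P_{X^{i-1}\mid V_n}P_{Y^{i-1}\mid V_n}$. The two have the same $X$- and $Y$-marginals but generally different joints, so the per-letter maxima cannot simply be averaged against $P_W$. Worse, if your display were true, combining it with the rest of your argument would show $\calT_{\Ex}(\pi_{XY})\subset\ucalR(\pi_{XY})$ and hence $\calT_{\Ex}(\pi_{XY})=\ucalR(\pi_{XY})$ outright, which is strictly stronger than the theorem asserts. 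The correct single-letterization is the paper's Lemma~\ref{lem:sl_lb}, whose right-hand side introduces the infimum over couplings $Q_{WW'}\in\calC(P_W,P_W)$ precisely to absorb this discrepancy between the coupling-induced prefix joint and the true one; it is that infimum, not a retrospective weakening of your cleaner bound, that produces the $\ocalR$ constraint.
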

\begin{remark}
To alleviate any possible confusion, we remark that in Theorem \ref{thm:exact_sl} in which the optimal rate (exact common information) $T_{\Ex}(\pi_{XY})$ is sought, the {\em achievability} part (resp.\ converse part) corresponds to the {\em upper} bound  (resp.\ lower bound) on $T_{\Ex}(\pi_{XY})$. In contrast,   in Theorem \ref{thm:ecs_sl} in which the rate region   $\calT_{\Ex}(\pi_{XY})$ is sought, the {\em achievability} part (resp.\ converse part) corresponds to the {\em inner} bound  (resp.\ outer bound) on $T_{\Ex}(\pi_{XY})$. 
\end{remark}
The difference between the inner bound $\ucalR(\pi_{XY})$ and $\calC_\Wyner(\pi_{XY})$ is the bound on the sum rate. In the former, the sum rate is lower bounded by $-H(XY;W)+\bbE_{P_W} \big[ \rvH_{\infty}( P_{X|W}, P_{Y|W} \|\pi_{XY}) \big]$ (see \eqref{eqn:ecs_ml2}) while for the latter (see \eqref{eqn:calC}), the sum rate is lower bounded by $I(XY;W)$. It can easily be seen that the inner bound is indeed a subset of $\calC_\Wyner(\pi_{XY})$. This is because 
\begin{align}
&\sum_{w\in\calW}P_W(w)\rvH_{\infty}(P_{X|W=w},P_{Y|W=w}\|\pi_{XY}) \nn\\*
& \ge\sum_{w\in\calW}P_W(w) \sum_{x,y}P_{X|W }(x|w)P_{Y|W }(y|w)\log\frac{1}{\pi_{XY}(x,y)}\\
&= \sum_{x,y} P_{XY}(x,y)\log\frac{1}{\pi_{XY}(x,y)}=H_\pi( XY ), \label{eqn:lower_bd_Hinfty}
\end{align}
where the final equality follows from the fact that $P_{XY}=\pi_{XY}$.  As a result, 
\begin{align}
I(XY;W)& = H(XY) - H(XY;W) \\
&\le \bbE_{P_W} \big[ \rvH_{\infty}( P_{X|W}, P_{Y|W} \|\pi_{XY}) \big] - H(XY;W).
\end{align}
Thus the lower bound on the sum rate in $\ucalR(\pi_{XY})$ is at least as large as that on the sum rate in $\calC_{\Wyner}(\pi_{XY})$, which implies that  $\ucalR(\pi_{XY}) \subset \calC_{\Wyner}(\pi_{XY})$. 

\subsection{Ideas for the Proof of Theorem \ref{thm:ecs_sl}}
In this section, we provide brief sketches of the set inclusions in Theorem~\ref{thm:ecs_sl}; this section can be omitted at a first reading.  In Section~\ref{sec:multiLetterECS}, we have already provided a sketch of the proof  that $\ucalR(\pi_{XY})\subset\calT_{\Ex}(\pi_{XY})$  (the achievability part) so we proceed to show the other inclusions.  

Let us now reason that $\calT_{\Ex}(\pi_{XY})\subset \calC_{\Wyner}(\pi_{XY})$. By the  chain of inequalities leading to \eqref{eqn:lower_bd_Hinfty}, we see that 
\begin{equation}
\sum_{w\in\calW}P_W(w)\rvH_{\infty}(P_{X^n|W=w},P_{Y^n|W=w}\|\pi_{XY}^n)\ge n H_\pi(XY)
\end{equation}
and so 
\begin{equation}
 I(X^n Y^n;W)\le - H(X^nY^n|W) +\bbE_{P_W}\big[ \rvH_{\infty} (P_{X^n|W},P_{Y^n|W}\|\pi_{XY}^n)\big]. 
\end{equation}
Hence, $\ucalR(\pi_{XY}^n)\subset \calC_{\Wyner}(\pi_{XY}^n)$. Furthermore, \citet{cuff13} showed that the set $\calC_{\Wyner}(\pi_{XY}^n)$ {\em tensorizes}, i.e., $\frac{1}{n}\calC_{\Wyner}(\pi_{XY}^n)=\calC_{\Wyner}(\pi_{XY} )$. Thus, by the multi-letter characterization of $\calT_{\Ex} (\pi_{XY})$ in terms of the union of the sets $\frac{1}{n}\ucalR(\pi_{XY}^n)$ for $n\in\bbN$ in Theorem~\ref{thm:ecs_ml}, we see that $\calT_{\Ex} (\pi_{XY})\subset \calC_{\Wyner}(\pi_{XY})$. This bound is completely analogous to the fact that the exact common information is at least as large as Wyner's common information; recall the derivation of this in \eqref{eqn:exact_ge_wyner}. 

Thus, it remains to prove the alternative outer bound $\calT_{\Ex} (\pi_{XY})\subset \ocalR(\pi_{XY})$ (converse). This requires a single-letterization result in \cite[Theorem~2]{YuTan2020_exact} which is restated here for the reader's convenience. 
\begin{lemma}\label{lem:sl_lb}
For a triple of random variables $(X^n, Y^n, Z) \in\calX^n\times\calY^n\times\calZ$ such that $(X^n, Y^n) \sim \pi_{XY}^n$ and $X^n-Z-Y^n$, we have 
\begin{align}
 &\hspace{-.3in}- \frac{1}{n} H(X^n Y^n|Z)+\frac{1}{n}\sum_{z \in\calZ} P_{Z}(z) \rvH_{\infty}(P_{X^n |Z=z}, P_{Y^n |Z=z}\|\pi_{XY}^n)\\*
 &\hspace{-.3in}\ge -H(XY|W) + \inf_{\substack{Q_{WW'}\\ \in \calC(P_W,P_W)}}  \bbE_{Q_{WW'}}\big[ \rvH_{\infty}(P_{X|W},P_{Y|W'}\|\pi_{XY})\big],   \label{eqn:sl_lb}
\end{align}
where $W:=(Z ,J, X^{J-1}, Y^{J-1})$, $X:=X_J$, $Y:=Y_J$, and $J\sim\mathrm{Unif}[n]$ denotes a random index independent of $(X^n, Y^n, Z)$. 
\end{lemma}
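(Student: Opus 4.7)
My plan is to prove~\eqref{eqn:sl_lb} by single-letterizing the two terms on its left-hand side separately, following the template of the converse proof sketch for Theorem~\ref{thm:RCI}(c). For the entropy term, a standard chain-rule-plus-time-sharing argument gives $\tfrac{1}{n}H(X^nY^n|Z)=H(XY|W)$ once I introduce $J\sim\mathrm{Unif}[n]$ independent of $(X^n,Y^n,Z)$ and set $W=(Z,J,X^{J-1},Y^{J-1})$, $X=X_J$, $Y=Y_J$. The Markov chain $X^n-Z-Y^n$ makes $X^{j-1}$ and $Y^{j-1}$ conditionally independent given $Z$, so that $P_{X\mid W=(z,j,x^{j-1},y^{j-1})}=P_{X_j\mid Z=z,X^{j-1}=x^{j-1}}$ (not depending on $y^{j-1}$), and $P_{Y\mid W}$ depends on $W$ only through $(Z,J,Y^{J-1})$; these simplifications will be essential in the final matching step.

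The substantive work is to single-letterize the maximal cross-entropy term. For each fixed $z$, the chain rule for coupling sets (Lemma~\ref{lem:coupling}) says every product coupling $Q^{(z)}=\prod_{i=1}^n Q^{(z)}_i$ with factors $Q^{(z)}_i(\cdot,\cdot\mid x^{i-1},y^{i-1})\in\calC(P_{X_i\mid X^{i-1}=x^{i-1},z},P_{Y_i\mid Y^{i-1}=y^{i-1},z})$ belongs to $\calC(P_{X^n|z},P_{Y^n|z})$. Since $\rvH_\infty$ is a maximum over this larger class, restricting to product couplings and choosing each factor $Q^{(z)}_i$ pointwise to maximize the $i$th term's contribution yields
\begin{align*}
\rvH_\infty(P_{X^n|z},P_{Y^n|z}\|\pi_{XY}^n)\ge \sum_{i=1}^n \bbE_{Q^{(z)}(X^{i-1},Y^{i-1})}\big[\rvH_\infty(P_{X_i\mid X^{i-1},z},P_{Y_i\mid Y^{i-1},z}\|\pi_{XY})\big].
\end{align*}
The outer marginal $Q^{(z)}(X^{i-1},Y^{i-1})$ automatically lies in $\calC(P_{X^{i-1}|z},P_{Y^{i-1}|z})$, so replacing this expectation by an infimum over that coupling set produces a lower bound independent of the particular product coupling used. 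Averaging over $Z$, dividing by $n$, and inserting $J\sim\mathrm{Unif}[n]$ then recasts the right-hand side as $\bbE_{(Z,J)}\big[\inf_{\tilde{Q}\in\calC(P_{X^{J-1}|Z},P_{Y^{J-1}|Z})}\bbE_{\tilde{Q}}[\rvH_\infty(P_{X_J\mid X^{J-1},Z},P_{Y_J\mid Y^{J-1},Z}\|\pi_{XY})]\big]$.

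The last step is to match this single-letter expression with the infimum over couplings $Q_{WW'}\in\calC(P_W,P_W)$ appearing on the right-hand side of~\eqref{eqn:sl_lb}, by exhibiting an admissible $Q_{WW'}$ that achieves it. Using the $W$-simplifications from the first paragraph, $\rvH_\infty(P_{X\mid W},P_{Y\mid W'}\|\pi_{XY})$ depends on $(W,W')$ only through $(Z,J,X^{J-1})$ from $W$ and through $(Z',J',Y'^{J'-1})$ from $W'$. Accordingly, I would construct $Q_{WW'}$ by enforcing $Z'=Z$, $J'=J$, coupling $(X^{J-1},Y'^{J-1})$ via any prescribed $\tilde{Q}\in\calC(P_{X^{J-1}|Z},P_{Y^{J-1}|Z})$, and drawing $Y^{J-1}$ (inside $W$) and $X'^{J-1}$ (inside $W'$) independently from $P_{Y^{J-1}|Z}$ and $P_{X^{J-1}|Z}$ respectively. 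The Markov chain $X^n-Z-Y^n$ ensures that the $W$-marginal of $(X^{J-1},Y^{J-1})\mid(Z,J)$ is exactly $P_{X^{J-1}|Z}\times P_{Y^{J-1}|Z}$, so the construction indeed has the correct $P_W$ marginals on both coordinates. Taking $\tilde{Q}$ to attain the inner infimum above shows that the $Q_{WW'}$-infimum is no larger than $\tfrac{1}{n}\sum_z P_Z(z)\rvH_\infty(\cdot)$, which, combined with the entropy identity, yields~\eqref{eqn:sl_lb}.

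The main obstacle lies in the careful bookkeeping of the second paragraph: first, pulling the pointwise maximum over $Q^{(z)}_i$ inside the outer expectation (valid because the inner objective is linear in $Q^{(z)}_i$ for each fixed history, so pointwise optimization across histories causes no conflict), and then relaxing the induced outer marginal on $(X^{i-1},Y^{i-1})$ to an infimum over $\calC(P_{X^{i-1}|z},P_{Y^{i-1}|z})$. A secondary technicality is verifying that the coupling $Q_{WW'}$ built in the final step has both of its marginals equal to $P_W$; this reduces to a single application of the Markov property $X^n-Z-Y^n$.
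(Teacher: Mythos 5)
Your plan is correct and, at the level of ideas, follows essentially the same route the paper takes (the chain rule for coupling sets, pointwise "greedy" selection of the single-step couplings, and matching the resulting single-letter expression to $\inf_{Q_{WW'}\in\calC(P_W,P_W)}\bbE_{Q_{WW'}}[\cdot]$ via an explicit construction of $Q_{WW'}$). All three of your key steps hold up: (i) the identity $\tfrac{1}{n}H(X^nY^n|Z)=H(XY|W)$ follows from the chain rule together with independence of $J$; (ii) for each $z$, restricting the maximization in $\rvH_\infty$ to the product-coupling class of Lemma~\ref{lem:coupling} and then choosing each $Q^{(z)}_i(\cdot,\cdot\mid x^{i-1},y^{i-1})$ to maximize its linear single-step objective gives $\rvH_\infty(P_{X^n|z},P_{Y^n|z}\|\pi_{XY}^n)\ge\sum_i\bbE_{\hat{Q}^{(z)}_{i-1}}[\rvH_\infty(P_{X_i|X^{i-1},z},P_{Y_i|Y^{i-1},z}\|\pi_{XY})]$ with $\hat{Q}^{(z)}_{i-1}\in\calC(P_{X^{i-1}|z},P_{Y^{i-1}|z})$, and replacing that specific coupling by an infimum over the coupling set only weakens the bound; and (iii) the coupling you build $(Z'=Z,\,J'=J,\,(X^{J-1},Y'^{J-1})\sim\tilde Q_{Z,J}$, with the complementary halves drawn independently$)$ has both marginals equal to $P_W$ precisely because the Markov chain $X^n-Z-Y^n$ makes $P_{X^{j-1}Y^{j-1}|Z}$ a product, and the integrand $\rvH_\infty(P_{X|W},P_{Y|W'}\|\pi_{XY})$ sees $W$ only through $(Z,J,X^{J-1})$ and $W'$ only through $(Z',J',Y'^{J'-1})$, so the expectation collapses to exactly the single-letter quantity in step (ii). Taking the infimum over the family $\{\tilde Q_{z,j}\}$ and combining with the entropy identity finishes the proof. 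The one thing worth stating more carefully in a write-up is the order of quantifiers in step (ii): for a \emph{fixed} greedy product coupling, its induced $(i-1)$-marginal $\hat{Q}^{(z)}_{i-1}$ is a particular element of $\calC(P_{X^{i-1}|z},P_{Y^{i-1}|z})$, and only then do you pass to the infimum term by term; stated this way the logic is airtight.
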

This lemma says that   the $n$-letter expression whose expectation is over a single distribution $P_{Z}$ can be lower bounded by a single-letter expression at the additional ``cost'' of an optimization over couplings $Q_{WW'}\in\calC(P_W,P_W)$.  From this lemma, we see that the multi-letter expression of the sum rate in $\ucalR(\pi_{XY}^n)$ can be lower bounded by the single-letter expression in \eqref{eqn:sl_lb}. On the other hand, observe that 
\begin{align}
P_{W_n K_n X^i Y^{i-1}}  &= P_{W_n K_n}P_{X^i|W_n K_n}P_{Y^{i-1} | W_n K_n}\\
&= P_{W_n K_n}P_{X_i|W_n K_n}P_{X^{i-1}|W_n K_n} P_{Y^{i-1} | W_n K_n},
\end{align}
so that $X_i -( W_n , K_n,   X^{i-1})-Y^{i-1}$ forms a Markov chain for all $i \in [n]$.  As in Lemma~\ref{lem:sl_lb}, let $J \sim \mathrm{Unif}[n]$ be a  random index independent of the random variables in $(X^n, Y^n , W_n, K_n)$ and let $X:=X_J$, $Y:=Y_J$, and $W:=(W_n, K_n, X^{J-1} ,Y^{J-1}, J )$.  Hence, 
\begin{align}
nR &\ge H(W_n|K_n) \\
&\ge I( X^n;W_n|K_n) \\
&= I( X^n;W_n,K_n) \\
& = \sum_{i=1}^n I( X_i;W_n K_n|X^{i-1}) \\
& = \sum_{i=1}^n I( X_i;W_n K_n X^{i-1}) \label{eqn:source_iid} \\
& = \sum_{i=1}^n I( X_i;W_n K_n X^{i-1}  Y^{i-1}) \label{eqn:bec_mc} \\
& = n  I( X_J;W_n K_n ,X^{J-1}  Y^{J-1}|J ) \\
& = n  I( X_J;W_n K_n X^{J-1} Y^{J-1} J ) \\
&=nI(X;W), 
\end{align}
where \eqref{eqn:source_iid} follows from the fact that $\{X_i\}_{i=1}^n$ is a memoryless process and \eqref{eqn:bec_mc} follows because   $X_i -( W_n  ,K_n  , X^{i-1})-Y^{i-1}$ forms a Markov chain for all $i \in [n]$. This
completes the proof that $\calT_{\Ex} (\pi_{XY})\subset \ocalR(\pi_{XY})$.

\subsection{Tradeoff Between the Communication and Common \\ Randomness Rates} \label{sec:tradeoff_rates}
We now examine the tradeoff between the communication rate $R$ and the common randomness rate $R_0$ in the optimal region for exact channel synthesis $\calT_\Ex(\pi_{XY})$. For this  purpose, define the two optimal rates
\begin{align}
T^*(R_0) &:= \inf \left\{  R\in\bbR_+:  (R,R_0) \in \calT_\Ex(\pi_{XY})  \right\} \quad\mbox{and} \label{eqn:T_starR0} \\
T_0^*(R) &:= \inf \left\{  R_0\in\bbR_+:  (R,R_0) \in \calT_\Ex(\pi_{XY})  \right\} .
\end{align}
From the inner and outer bounds in Theorem \ref{thm:ecs_sl}, we see that 
\begin{equation}
T^*(\infty)=I_\pi(X;Y), \label{eqn:T_Star}
\end{equation}
where $I_\pi$ denotes the mutual information computed with respect to the target distribution $\pi_{XY}$. 
This is because when $R_0=\infty$, the sum rate bound is inactive. Eqn.~\eqref{eqn:T_Star}   is consistent with the observation in~\citet{Bennett02}. Namely, when there is unlimited shared or common randomness  at the encoder and the decoder, the target channel $\pi_{Y|X}$ can
be successfully synthesized by some protocol if and only if the minimum
asymptotic communication rate is  at least 
the mutual information $I_\pi(X;Y)$ (refer to Table~\ref{tab:ecs}).  This is the same as approximate channel synthesis in the TV metric (refer to Fig.~\ref{fig:dist_chsyn}). More interestingly, \citet{Bennett02} showed that an {\em exponential} number of bits  of common randomness  {\em suffices} for~\eqref{eqn:T_Star} to hold. This condition is rather different from what we have  seen from~\eqref{eqn:R0_req} in the context of approximate channel synthesis in which a shared randomness rate of $H_\pi(Y|X)$ is  needed for us to ensure that the communication rate is $I_\pi(X;Y)$. \citet{Bennett02} also conjectured that an exponential number of bits of common randomness (which implies that $R_0=\infty$) is, in fact, {\em necessary} for~\eqref{eqn:T_Star} to hold. 

 \citet{Harsha10} and \citet{LiElgamal2018} disproved this conjecture for $(X, Y )\sim\pi_{XY}$ with finite alphabets. These authors showed that shared randomness with rate $\log|\calY|$ is sufficient to realize \eqref{eqn:T_Star}, i.e., 
\begin{equation}
T^*(\log |\calY|) \le I_\pi(X;Y). \label{eqn:T_Star2}
\end{equation}
The result in Theorem \ref{thm:ecs_sl}, in fact, yields a   better bound. Consider,
\begin{align}
\hspace{-.2in} T_0^*(I_\pi(X;Y))  &= \inf \left\{  R_0:  (I_\pi(X;Y),R_0) \in \calT_\Ex(\pi_{XY})  \right\} \\
\hspace{-.2in}&\le \min_{P_{W|Y}: X-W-Y}-H(X)-H(Y|W) \nn\\*
\hspace{-.2in}&\qquad\qquad+\bbE_{P_W}\big[ \rvH_{\infty}(P_{X|W}, P_{Y|W}\|\pi_{XY}) \big] \label{eqn:follows_from_inner}\\
\hspace{-.2in}&\le H_\pi(Y|X), \label{eqn:setWY}
\end{align} 
where \eqref{eqn:follows_from_inner}  results from the inner bound in Theorem~\ref{thm:ecs_sl}  (setting the communication rate as $R=I(X;Y)$) and \eqref{eqn:setWY} follows from setting $W=Y$ (so that $\bbE_{P_W} [ \rvH_{\infty}(P_{X|W}, P_{Y|W}\|\pi_{XY}) ]=H(XY)$).  We will see from Section~\ref{sec:ecs_dsbs} that the bound in \eqref{eqn:setWY} is tight for the DSBS. 


Why is  the amount of common randomness yielded by Theorem~\ref{thm:ecs_sl}     smaller than those in the works~\cite{Bennett02,Harsha10,LiElgamal2018} prior to that of the present authors? In the coding scheme of~\cite{Bennett02}, the shared randomness is used to generate a codebook. However, as described in the   sketch of the coding scheme for Theorem~\ref{thm:ecs_ml}, we use the so-called mixture decomposition technique (cf.\ Section~\ref{sec:sketch_equiv}) to construct a
variable-length exact synthesis code. This  is a mixture of
a fixed-length approximate synthesis code that   ensures the R\'enyi divergence of order $ \infty$  vanishes and a completely lossless code 
 of rate $\log( |\calX||\calY|)$ (see~\eqref{eqn:tot_rate}). The performance of this scheme is dominated by   that of  the approximate synthesis code which requires a much lower rate of shared randomness compared to the scheme in \cite{Bennett02}. Furthermore, the codes in~\cite{Harsha10} and~\cite{LiElgamal2018} are such that $Y^n$ is required to be a  {\em deterministic} function of $W_n$ and~$K_n$. In contrast, we allow $Y^n$ to be a {\em stochastic} function of $(W_n,K_n)$ (cf.\ Definition~\ref{def:vlcode_ecs}).  Hence, naturally, the rate of common randomness is reduced. 

Finally, we mention that $\calT_{\Ex}(\pi_{XY})$, in general, is a strict subset of $\calT(\pi_{XY}) = \calC_\Wyner (\pi_{XY})$. This is because, from the operational definitions,  
\begin{equation}
T^*(0) = T_{\Ex}(\pi_{XY}) ,
\end{equation}
and as we have seen from Section~\ref{sec:exact_dsbs} for the DSBS with crossover probability $p\in (0,1/2)$,
\begin{equation}
T_{\Ex}(\pi_{XY}) > C_{\Wyner}(\pi_{XY}).
\end{equation}
We   evaluate the region $\calT_{\Ex}(\pi_{XY})$ for the DSBS in Section~\ref{sec:ecs_dsbs} and compare it to $\calC_{\Wyner}(\pi_{XY})$.
\section{Symmetric Binary Erasure Sources} \label{sec:ecs_erasure}
In this section, we revisit the SBES as discussed in Section~\ref{sec:sbes}. Recall that this is a source with uniform $X \in\{0,1\}$ and that $Y$ is connected to $X$ via a binary erasure channel with   erasure probability~$p$. The joint distribution is given in \eqref{eqn:sbes_jd}. We saw that the exact common information of the SBES is equal to its Wyner's common information because Condition~\eqref{eqn:XYgivenW} in Corollary~\ref{cor:equiv}, namely that $\sum_{w\in\calW} H(X|W=w)H(Y|W=w)=0$, is satisfied.

For the SBES, \citet{cuff13} evaluated the optimal rate region for TV approximate synthesis (cf.\ Definition~\ref{def:tv_approx_syn}). Unsurprisingly, the region is the same as that for exact channel synthesis (cf.\ Definition~\ref{def:ECS}).
\begin{proposition} 
For the SBES with erasure probability $p$, we have 
\begin{align}
&\hspace{-.3in}\calT(\pi_{XY}) = \calT_\Ex(\pi_{XY}) = \calC_\Wyner (\pi_{XY})  \\
&\hspace{-.3in}=\bigcup_{  1-p\le r\le r^*} \left\{  (R, R_0)   \, :  \, \parbox[c]{2.25in}{$\hspace{.81cm} R\ge r$  \vspace{0.015 in}\\ $\displaystyle R+R_0\ge h (p)+r\bigg(  1- h \Big(\frac{1-p}{r}\Big)\bigg)$ }  \right\}, \label{eqn:TEx_sbes}
\end{align}
where $r^*=\min\{2(p-1),1\}$ and $h (\cdot)$ is the binary entropy function.
\end{proposition}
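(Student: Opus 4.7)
The plan is to establish the three equalities separately and then evaluate $\calC_\Wyner(\pi_{XY})$ in closed form using the Markov structure on the right of Fig.~\ref{fig:sbes}. The first equality $\calT(\pi_{XY}) = \calC_\Wyner(\pi_{XY})$ follows immediately from Theorem~\ref{thm:dist_chsyn}. The real work is to establish $\calT_\Ex(\pi_{XY}) = \calC_\Wyner(\pi_{XY})$, which I will deduce from Theorem~\ref{thm:ecs_sl} by showing that the inner bound $\ucalR(\pi_{XY})$ already coincides with $\calC_\Wyner(\pi_{XY})$ for this source. Since the outer bound $\calT_\Ex(\pi_{XY}) \subset \calC_\Wyner(\pi_{XY})$ is part of Theorem~\ref{thm:ecs_sl}, the inclusion $\ucalR(\pi_{XY}) \subset \calT_\Ex(\pi_{XY})$ then pinches $\calT_\Ex(\pi_{XY})$ between identical sets.

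The key observation is that the SBES joint distribution is a pseudo-product distribution (Definition~\ref{def:pseudo_pdt}(b)) on $\calA := \supp(\pi_{XY}) = \{(0,0),(0,\rme),(1,1),(1,\rme)\}$, as one checks directly by writing $\pi_{XY}(x,y)=\alpha(x)\beta(y)$ with $\alpha\equiv 1$, $\beta(0)=\beta(1)=(1-p)/2$, $\beta(\rme)=p/2$. For any admissible $P_W P_{X|W} P_{Y|W}$ with $P_{XY}=\pi_{XY}$ and any $w\in\supp(P_W)$, the condition $P_{XY}=\pi_{XY}$ forces $\supp(P_{X|W=w})\times\supp(P_{Y|W=w})\subset \calA$; since $\pi_{XY}$ factorizes on $\calA$, every coupling $Q_{XY}\in\calC(P_{X|W=w},P_{Y|W=w})$ is supported in $\calA$ and yields the same value
\[
\bbE_Q\bigl[\log\tfrac{1}{\pi_{XY}(X,Y)}\bigr]=\sum_{x,y}P_{X|W=w}(x)P_{Y|W=w}(y)\log\tfrac{1}{\pi_{XY}(x,y)}.
\]
This is exactly the reasoning behind \eqref{eqn:Hinfty_w} in the proof sketch of Lemma~\ref{lem:oGamma_CWy}. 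Averaging over $w$ and using $P_{XY}=\pi_{XY}$ yields $\bbE_{P_W}[\rvH_\infty(P_{X|W},P_{Y|W}\|\pi_{XY})]=H_\pi(XY)$, so the inner-bound sum-rate constraint~\eqref{eqn:ecs_ml2} collapses to $R+R_0\ge H(XY)-H(XY|W)=I(XY;W)$, matching \eqref{eqn:calC}. Hence $\ucalR(\pi_{XY})=\calC_\Wyner(\pi_{XY})$, which completes this step.

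To obtain the explicit description I will parametrize $\calC_\Wyner(\pi_{XY})$ via the cascade on the right of Fig.~\ref{fig:sbes}: $X$ uniform, $X\to W$ a BEC$(p_1)$, and $W\to Y$ the channel that forwards $\rme$ noiselessly and erases $\{0,1\}$ with probability $p_2$, with $(1-p_1)(1-p_2)=1-p$. A direct computation gives $H(X|W)=p_1$, $H(Y|W)=(1-p_1)h(p_2)$ with $h(p_2)=h((1-p)/r)$ where $r:=1-p_1$, hence $I(X;W)=r$ and $I(XY;W)=h(p)+r\bigl(1-h((1-p)/r)\bigr)=:f(r)$, matching the stated inequalities. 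Feasibility requires $r\in[1-p,1]$. Differentiating yields $f'(r)=1+\log(1-(1-p)/r)$, which vanishes only at $r=2(1-p)$, is negative for $r<2(1-p)$, and positive for $r>2(1-p)$; thus any $r>r^*:=\min\{2(1-p),1\}$ is dominated in the union by $r=r^*$, since both $R\ge r$ and $R+R_0\ge f(r)$ become less stringent there. (This indicates the ``$2(p-1)$'' in the statement should be read as $2(1-p)$.)

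The main obstacle I expect is the reduction of the Wyner minimization to this one-parameter cascade family. The support-containment argument above forces every atom $w$ to satisfy either $|\supp(P_{X|W=w})|=1$ or $\supp(P_{Y|W=w})=\{\rme\}$, naturally partitioning atoms into three groups. Merging atoms within each group into a single value of $W\in\{0,\rme,1\}$ preserves $P_{XY}$ and, by a concavity/data-processing argument on $I(XY;W)$ versus $I(X;W)$, cannot worsen the achieved rate pair; this reduces the problem to the three-atom cascade, which in turn---by the $X\leftrightarrow 1-X$ symmetry of $\pi_{XY}$---collapses to the one-parameter family above. The cleanest alternative is to invoke Cuff's direct computation of $\calC_\Wyner$ for the SBES \cite{cuff13} as a black box; I would spell out the convexity/grouping argument only if a self-contained proof is desired.
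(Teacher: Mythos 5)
Your proposal is correct and fleshes out a step the paper leaves terse. Both you and the paper rest on the same pivotal fact: that $\rvH_\infty(P_{X|W=w},P_{Y|W=w}\|\pi_{XY})$ collapses to the product cross-entropy, so the sum-rate constraint in $\ucalR$ reduces to $I(XY;W)$ and the sandwich $\ucalR\subset\calT_\Ex\subset\calC_\Wyner$ pinches. The paper arrives at this via Condition~\eqref{eqn:XYgivenW} of Corollary~\ref{cor:equiv}, i.e., that for the cascade $P_W$ of Fig.~\ref{fig:sbes} every $w$ has either $H(X|W=w)=0$ or $H(Y|W=w)=0$ so the coupling set is a singleton; it then invokes Cuff's characterization of $\calC_\Wyner(\pi_{XY})$ and asserts the exact region is ``unsurprisingly'' the same. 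You instead observe that the SBES is a pseudo-product distribution, so $\pi_{XY}(x,y)=\alpha(x)\beta(y)$ on $\supp(\pi_{XY})$ makes $\bbE_{Q}[\log(1/\pi_{XY})]$ coupling-independent for \emph{every} admissible $P_W$, yielding the set identity $\ucalR(\pi_{XY})=\calC_\Wyner(\pi_{XY})$ directly and without first knowing that the cascade family exhausts $\calC_\Wyner$. This is a cleaner and slightly more general route to the same conclusion; the explicit parametrization still requires the Cuff reduction (or your sketched merging/symmetrization argument), which you rightly flag as the only substantial remaining step. Your computations $I(X;W)=r$, $I(XY;W)=h(p)+r\bigl(1-h((1-p)/r)\bigr)$, and $f'(r)=1+\log\bigl(1-(1-p)/r\bigr)$ are all correct, and your observation that the stated $r^*=\min\{2(p-1),1\}$ must be a typo for $\min\{2(1-p),1\}$ is also correct.
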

Recall  that for the SBES, the optimal distribution attaining Wyner's common information is a concatenation of a BEC with  erasure probabilities $p_1$ and a BEC-like channel with erasure probability  $p_2$ such that $(1-p_1)(1-p_2)=1-p$. The parameter $r$  in \eqref{eqn:TEx_sbes} represents the term $1-p_1$. The various terms in \eqref{eqn:TEx_sbes} are simply the evaluations of $I(W;X)$ and $I(W;XY)$ in the description of $\calC_{\Wyner}(\pi_{XY})$ with this parametrization.

\begin{figure}
\centering
\includegraphics[width = .8\columnwidth]{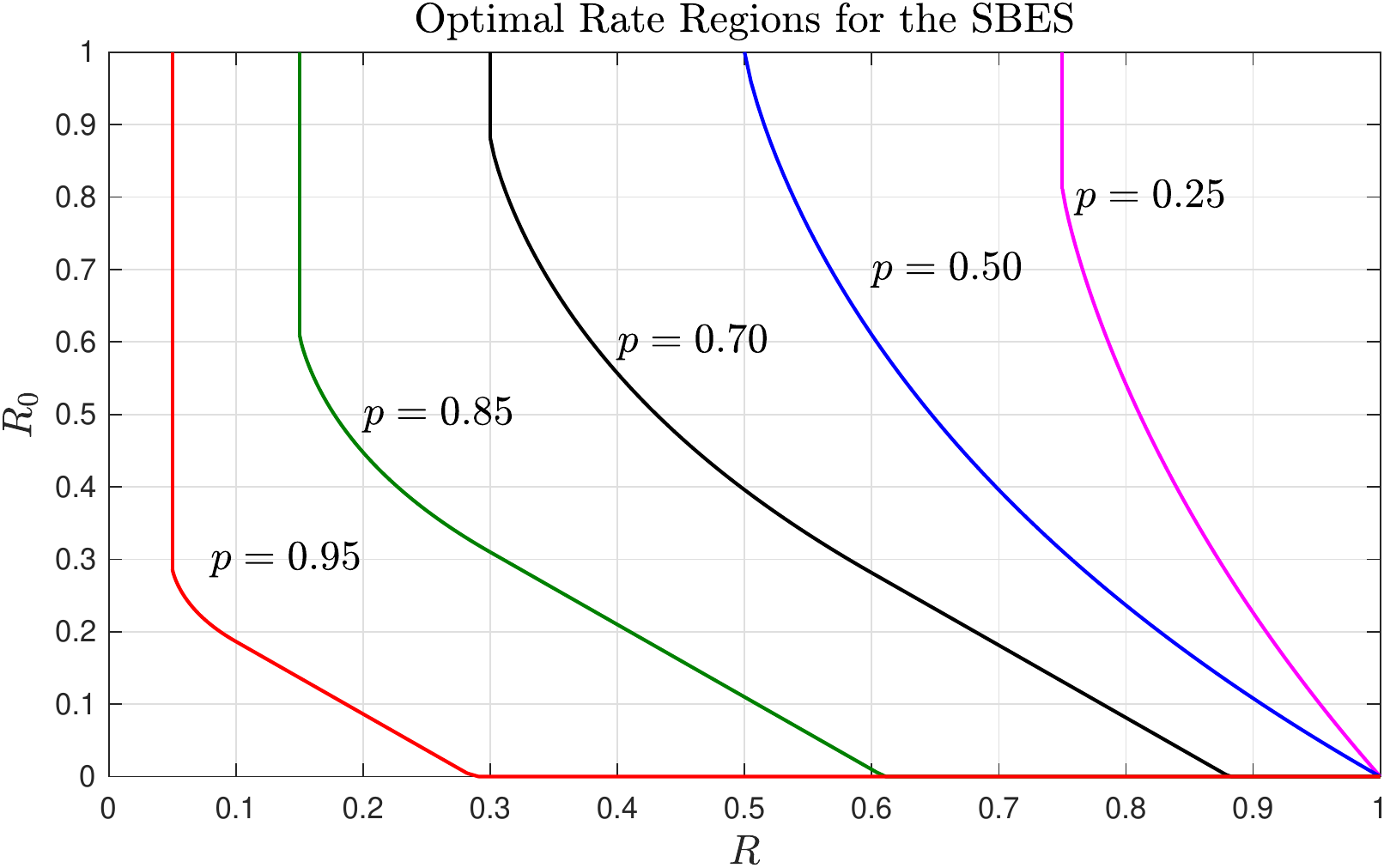}
\caption{Plot of the optimal rate regions (top right hand regions of the boundaries) for approximate and exact channel synthesis for the SBES}
\label{fig:ecs_sbes}
\end{figure}

The regions for various $p\in [0,1]$ are illustrated in Fig.~\ref{fig:ecs_sbes}. The boundaries cross the horizontal axis at $C_\Wyner(\pi_{XY})$, which is equal to $1$ for all $0\le p\le 0.5$ (see~\eqref{eqn:cwyner_sbes} and Fig.~\ref{fig:eci_sbes}). 

\section{Doubly Symmetric Binary Sources} \label{sec:ecs_dsbs}
In this section, we consider the DSBS, a prototypical example in which exact synthesis requires larger rate than approximate synthesis. This is a source in which $X$ is uniform on $\{0,1\}$ and $Y$ is connected to $X$ via a binary symmetric channel with crossover probability $p$. In Section~\ref{sec:dsbs}, we mentioned that an alternative representation of this source is in terms of the optimal distribution $P_W P_{X|W} P_{Y|W}$ that attains Wyner's common information. This takes the form $W\sim \mathrm{Bern}(1/2)$, $X = W\oplus A$ (or equivalently, $W=X\oplus A$), and $Y = W\oplus B$ where $A$ and $B$ are independent $\mathrm{Bern}(a)$ random variables such that $a \ast a = p$.  This representation is useful but in this section, as $X$ and $Y$ are not treated symmetrically in the context of distributed channel synthesis, we find it convenient to reparameterize the representation as  $X = W\oplus A$ and $Y = W\oplus B$ where $A\sim \mathrm{Bern}(a)$  and $B\sim \mathrm{Bern}(b)$ such that $a \ast b = p$ and $a,b\in (0,p)$     are not necessarily equal. For a given $a\in (0,p)$, the corresponding $b$ is 
\begin{equation}
b =\frac{p-a}{1-2a}.
\end{equation}

Since the exact common information is strictly larger than that of Wyner's common information for the DSBS with crossover probability $p\in (0,1/2)$ (Proposition~\ref{prop:dsbs_exact}), the optimal rate regions in the context of distributed channel synthesis are also different; in particular  $\calT_\Ex(\pi_{XY})\subsetneq\calC_\Wyner(\pi_{XY})$ for all $p\in (0,1/2)$. 
\begin{proposition}\label{prop:dsbs_ecs}
For the DSBS with crossover probability $p$, the optimal rate region for  TV approximate channel synthesis (Definition~\ref{def:tv_approx_syn})
\begin{align}
\!\calC_\Wyner(\pi_{XY})= \bigcup_{  0\le a\le p} \left\{ \! (R, R_0)   : \parbox[c]{2.15in}{$\hspace{.92cm} R\ge 1-h (a)$  \vspace{0.015 in}\\ $\displaystyle R+R_0\ge 1+h (p) -h (a) - h (b) $   }  \right\}. \!\label{eqn:Tapprox_dsbs}
\end{align}
The optimal rate region for  exact synthesis (Definition~\ref{def:ECS})
\begin{align}
\!\calT_\Ex(\pi_{XY}) = \bigcup_{  0\le a\le p} \! \left\{ \! (R, R_0)   : \parbox[c]{2.33in}{$\hspace{.8cm} R\ge 1-h (a)$  \vspace{0.02in}\\ $\displaystyle R+R_0\ge  \log\frac{2}{1\!-\! p}+(a+b) \log\frac{1\!-\! p}{p} $   \vspace{.0051in}\\  $ \textcolor{white}{.....................} -h (a) -h (b)$ }  \right\}\label{eqn:TEx_dsbs}.
\end{align}
\end{proposition}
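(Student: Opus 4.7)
The plan is to combine the general region characterizations---Theorem~\ref{thm:dist_chsyn} for the TV-approximate region and the sandwich $\ucalR(\pi_{XY}) \subseteq \calT_{\Ex}(\pi_{XY}) \subseteq \ocalR(\pi_{XY}) \cap \calC_{\Wyner}(\pi_{XY})$ from Theorem~\ref{thm:ecs_sl} for the exact region---with a direct evaluation on the symmetric parametric family $W \sim \mathrm{Bern}(1/2)$, $X = W \oplus A$, $Y = W \oplus B$ with $A \sim \mathrm{Bern}(a)$, $B \sim \mathrm{Bern}(b)$ mutually independent and $a \ast b = p$ (cf.~Section~\ref{sec:dsbs}). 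There are really two tasks: (i)~evaluate the single-letter rate bounds on this family (achievability), and (ii)~argue the matching converse that the union in each region may be restricted to this family.

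For (i), standard calculations give $I(X;W) = 1 - h(a)$ and $I(XY;W) = H(XY) - H(X|W) - H(Y|W) = 1 + h(p) - h(a) - h(b)$, which are the two bounds defining the approximate region in~\eqref{eqn:Tapprox_dsbs}. For the sum-rate bound in~\eqref{eqn:TEx_dsbs}, I would evaluate $\rvH_\infty(P_{X|W=w}, P_{Y|W=w} \| \pi_{XY})$ appearing in~\eqref{eqn:ecs_ml2}. Since $\log 1/\pi_{XY}(x,y) = \log \tfrac{2}{1-p} + \bone\{x \neq y\}\log \tfrac{1-p}{p}$ with $\log\tfrac{1-p}{p} > 0$, the inner maximization over couplings reduces to the two-variable linear program $\max \Pr(X \neq Y)$ with binary marginals $(\overline{a},a)$ and $(\overline{b},b)$; elementary bookkeeping (using $a+b \le 1$) gives $\max \Pr(X \neq Y) = a + b$, so $\rvH_\infty = \log\tfrac{2}{1-p} + (a+b)\log\tfrac{1-p}{p}$. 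Combining with $H(XY|W) = h(a) + h(b)$ then yields exactly the sum-rate bound in~\eqref{eqn:TEx_dsbs}.

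For (ii), the converse for~\eqref{eqn:Tapprox_dsbs} follows Wyner's derivation of $C_{\Wyner}(\pi_{XY})$ for the DSBS~\cite{WynerCI}: the $\mathbb{Z}_2 \times \mathbb{Z}_2$ bit-flip symmetry of $\pi_{XY}$ combined with the convexity of the mutual-information functionals allows any optimizer to be symmetrized so that $P_W = \mathrm{Bern}(1/2)$ and $P_{X|W}$, $P_{Y|W}$ lie in the parametric family, reducing to scalar parameters $(a,b)$ with $a \ast b = p$. For the converse to~\eqref{eqn:TEx_dsbs}, I would apply the same symmetrization and then show that within this family the outer bound $\ocalR(\pi_{XY})$ coincides with $\ucalR(\pi_{XY})$. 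This amounts to verifying that the equality coupling $Q_{WW'}(w,w') = \tfrac{1}{2}\bone\{w=w'\}$ attains the infimum in~\eqref{eqn:ecs_ml_u2}. The same LP applied to the off-diagonal case $w \neq w'$ gives $\max \Pr(X \neq Y) = 1 - |a-b|$, which strictly exceeds the diagonal value $a + b$ whenever $a, b \in (0, 1/2)$; since $\rvH_\infty$ is increasing in $\max \Pr(X \neq Y)$, any off-diagonal coupling mass strictly inflates $\bbE_Q[\rvH_\infty(P_{X|W}, P_{Y|W'})]$, so the equality coupling is the unique minimizer.

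The main obstacle will be the symmetrization step in the converse for the exact region. Wyner's original argument rests on joint convexity of $I(XY;W)$ in $P_{W|X,Y}$, but the functional $-H(XY|W) + \bbE_{P_W}[\rvH_\infty(P_{X|W}, P_{Y|W} \| \pi_{XY})]$ in~\eqref{eqn:ecs_ml2} is not obviously jointly convex since $\rvH_\infty$ is only linear (not convex) in $P_{X|W} \otimes P_{Y|W}$. A clean way around this is to perform the symmetrization at the code level: average the variable-length code $(\mathbf{f}, P_{W_n | X^n K_n}, P_{Y^n | W_n K_n})$ under the bit-flip group acting jointly on $(X^n, Y^n)$ before invoking the multi-letter characterization in Theorem~\ref{thm:ecs_ml}. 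This yields a symmetric optimizer whose conditionals automatically lie in the required family, leaving only the scalar LP analysis already outlined to complete the argument.
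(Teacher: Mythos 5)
Your achievability calculations are correct: on the parametric family $W\sim\Bern(1/2)$, $X=W\oplus A$, $Y=W\oplus B$, the single-letter evaluations $I(X;W)=1-h(a)$, $I(XY;W)=1+h(p)-h(a)-h(b)$, the LP value $\max\Pr(X\neq Y)=a+b$ (and hence $\rvH_\infty = \log\tfrac{2}{1-p}+(a+b)\log\tfrac{1-p}{p}$), and the off-diagonal value $1-|a-b|>a+b$ are all right. Combined with Theorem~\ref{thm:dist_chsyn} and the inner inclusion $\ucalR\subset\calT_\Ex$ from Theorem~\ref{thm:ecs_sl}, this establishes that both claimed regions are contained in the true optimal regions.

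The gap is in the converse for~\eqref{eqn:TEx_dsbs}, and your proposed fix does not close it. You correctly observe that direct symmetrization fails because $-H(XY|W)+\bbE_{P_W}[\rvH_\infty(P_{X|W},P_{Y|W}\|\pi_{XY})]$ is not jointly convex in $P_{W|XY}$, but the code-level symmetrization you suggest as a workaround does not restore what is needed. The converse $\calT_\Ex\subset\ocalR\cap\calC_\Wyner$ in Theorem~\ref{thm:ecs_sl} passes through the single-letterization Lemma~\ref{lem:sl_lb}, which produces $W:=(W_n,K_n,X^{J-1},Y^{J-1},J)$. Even if you force bit-flip symmetry on the multi-letter code (e.g.\ by adjoining a uniform $\Sigma$ to $K_n$), the resulting single-letter $P_WP_{X|W}P_{Y|W}$ carries $(X^{J-1},Y^{J-1})$ in $W$ and has no reason to lie in the two-parameter family; and the bound it satisfies is the $\ocalR$ one with an infimum over couplings $Q_{WW'}\in\calC(P_W,P_W)$ on a generally large alphabet $\calW$. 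Your LP showing the equality coupling is optimal is only for the binary $\calW=\{0,1\}$ of the parametric family, so it does not apply here, and your argument gives no mechanism to eliminate the coupling infimum for a general $P_WP_{X|W}P_{Y|W}$.

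The technique that does work (and which underlies the related Proposition~\ref{prop:dsbs_exact}) is a \emph{decoupling} lower bound on the maximal cross-entropy, not symmetrization. Writing $\alpha_w:=\Pr(X=0\mid W=w)$ and $\beta_{w'}:=\Pr(Y=0\mid W=w')$, one shows
\begin{align}
\rvH_{\infty}(P_{X|W=w},P_{Y|W=w'}\|\pi_{XY})
\ge \log\tfrac{2}{1-p} + \big(\min\{\alpha_w,\overline{\alpha_w}\}+\min\{\beta_{w'},\overline{\beta_{w'}}\}\big)\log\tfrac{1-p}{p},
\end{align}
i.e.\ a lower bound that is additively separable in $w$ and $w'$. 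Taking the expectation under any coupling $Q_{WW'}\in\calC(P_W,P_W)$ then collapses to $\bbE_{P_W}[\cdot]+\bbE_{P_W}[\cdot]$ and the infimum over couplings is removed at the outset. Only after that reduction does the optimization over $\{\alpha_w,\beta_w\}$ become tractable (via Wyner-type concavity arguments coupled with the additional constraint tracking $I(X;W)$, since here one must trace the full Pareto frontier of the pair $(I(X;W),\,\text{sum-rate bound})$, not merely the scalar $T_\Ex$). Your proposal, as written, is missing this separability step, and the symmetrization route you offer in its place does not substitute for it.
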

These regions are illustrated in Fig.~\ref{fig:exact_syn_dsbs}. We computed $\calT_\Ex(\pi_{XY})$ by fixing a point  $R_0\in [0,1]$ on the ordinate. Then  we compute $T^*(R_0)$, defined in \eqref{eqn:T_starR0}, as 
\begin{align}
T^*(R_0)=\min_{0\le a\le p }\max \bigg\{ 1-h (a), &\,\, \log\frac{2}{1-p}+(a+b) \log\frac{1-p}{p} \nn\\*
&\qquad\quad - h (a) -h (b)-R_0\bigg\}.
\end{align}
The same can be done for  $\calC_{\Wyner}(\pi_{XY})$.  It can be seen that for $p \in (0,1/2)$,   $\calT_\Ex(\pi_{XY})\subsetneq\calC_\Wyner(\pi_{XY})$. Intuitively,
this strict inclusion is a consequence of the type overflow
phenomenon described in Section~\ref{sec:type_over}. This observation also confirms that type overflow does not affect
the fundamental limits of TV approximate channel synthesis, but it does affect  the fundamental limits of  exact  channel synthesis   in the sense of strictly increasing the optimal communication rate for a fixed common randomness rate $R_0 \in [0, H_\pi(Y|X))$.
 
Finally, if we let $R=I_\pi(X;Y) =1-h (p)$ in $\calT_\Ex(\pi_{XY})$, we get that $a=p$ and $b=0$. Hence, the   rate of the common randomness is lower bounded as $R \ge h (p)$. Combining this with~\eqref{eqn:setWY} shows that 
\begin{equation}
T_0^*(1-h (p)) = h (p) = H_\pi(Y|X).
\end{equation}
Thus, for the DSBS, we have identified the optimal rate of the common randomness when the  communication rate  approaches its optimal value $I_\pi(X;Y)=1-h(p)$. In other words, for the DSBS,~\eqref{eqn:setWY} is tight. In fact, one can see that this critical rate  $H_\pi(Y|X)$ is the same as that for {\em approximate} channel synthesis (see Fig.~\ref{fig:dist_chsyn} and~\eqref{eqn:R0_req}).\\

\begin{figure}[!ht]
\centering
\begin{overpic}[width=.8\columnwidth]{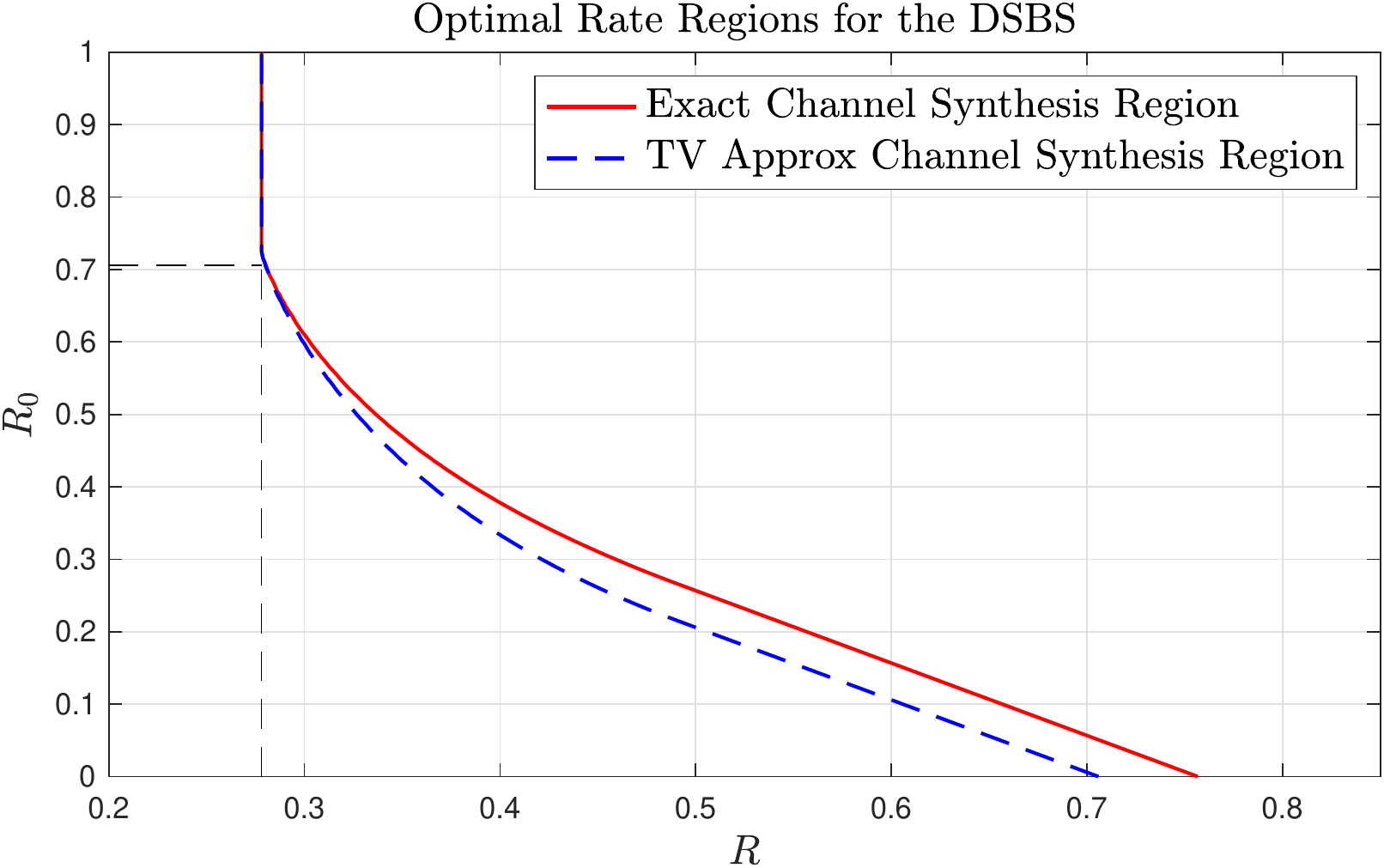}
\put(19,6.5){\circle*{1}}

\put(87,6.5){\circle*{1}}
\put(87,27){\vector(0,-1){20}}
\put(27,14){\vector(-1,-1){7}}

\put(80,6.5){\circle*{1}}
\put(80,27){\vector(0,-1){20}}
{\footnotesize
\put(-9,42){$H(Y|X)$}
\put(28,14.5){$I(X;Y)$}
\put(66,29){$C_{\Wyner}(\pi_{XY})$}
\put(84,29){$T_{\Ex}(\pi_{XY})$}}
\vspace{.1in}
\end{overpic}
\caption{Plot of the optimal rate regions (top right hand regions of the boundaries) for TV approximate and exact channel synthesis for the DSBS with crossover probability~$p=0.2$}
\label{fig:exact_syn_dsbs}
\end{figure}

\section{Jointly Gaussian Sources} \label{sec:ecs_gau}

We conclude this section by revisiting the jointly Gaussian source $\pi_{XY}$ with correlation coefficient $\rho \in (0,1)$. The full description of the source is available in Section~\ref{sec:wyner_gauss} so we will not repeat the details here apart to remind the reader that the source is assumed to have zero mean and correlation coefficient $\rho$. We saw from Section~\ref{sec:dist_ss_cts} that to evaluate Wyner's common information for sources with uncountable alphabets, it should satisfy some regularity conditions (cf.\ Proposition~\ref{prop:reg_wyner_cts}). Fortunately, the ubiquitous and canonical jointly Gaussian source satisfies these regularity conditions. As a result, Wyner's common information in terms of the information expression in~\eqref{eqn:CWyner_inf} can be evaluated and interpreted as the minimum rate of the description so that $\pi_{XY}$ can be simulated in a distributed fashion.

For the distributed channel synthesis problem, a similar set of regularity conditions~\cite[Corollary~2]{YuTan2020b} has to be verified to ensure that optimal rate region under which the TV distance between the synthesized distribution $P_{X^nY^n}$ and the target distribution $\pi_{XY}^n$ vanishes is equal to $\calC_\Wyner(\pi_{XY})$ in \eqref{eqn:calC}. Jointly Gaussian distributions do indeed satisfy these regularity conditions, yielding the following proposition.
\begin{proposition}
For the jointly Gaussian source with correlation coefficient $\rho \in (0,1)$, the optimal rate region for  TV approximate channel synthesis $\calT(\pi_{XY}) =\calC_\Wyner(\pi_{XY})$ (Definition~\ref{def:tv_approx_syn}) is the  set of rate pairs $(R , R_0) \in\bbR_+^2$ satisfying  
\begin{align}
 R &\ge \frac{1}{2}\log\bigg(\frac{1}{1-\alpha^2}\bigg)\label{eqn:acs_gauss1}\qquad\mbox{and}\\
 R+R_0&\ge  \frac{1 }{2}\log\bigg( \frac{1-\rho^2}{(1-\alpha^2)(1-\beta^2)} \bigg)  \label{eqn:acs_gauss2}
\end{align}
for some $\alpha\in [\rho,1]$ and $\beta = \rho/\alpha$. 
\end{proposition}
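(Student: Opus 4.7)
The plan is to first invoke Theorem~\ref{thm:dist_chsyn} together with the regularity conditions alluded to in the paragraph preceding the proposition (see also Proposition~\ref{prop:reg_wyner_cts} for the analogous regularity conditions for Wyner's common information on continuous alphabets) to reduce the task to evaluating the single-letter region $\calC_\Wyner(\pi_{XY})$ for the jointly Gaussian source. Since the jointly Gaussian density is log-concave, smooth, and the relevant second moments and mutual information $I_\pi(X;Y)=\frac{1}{2}\log\frac{1}{1-\rho^2}$ are finite, these regularity conditions are satisfied (exactly as verified for Wyner's common information in Section~\ref{sec:wyner_gauss}). Thus it suffices to compute $\calC_\Wyner(\pi_{XY})$ for the Gaussian source.

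For the achievability (inner bound), I would restrict the optimization in~\eqref{eqn:calC} to jointly Gaussian test channels of the form $X=\alpha W+\sqrt{1-\alpha^2}\,N_1$ and $Y=\beta W+\sqrt{1-\beta^2}\,N_2$, where $W, N_1, N_2$ are independent standard Gaussian random variables. By construction $X-W-Y$ is a Markov chain and $(X,Y)$ is zero-mean unit-variance jointly Gaussian. The constraint $P_{XY}=\pi_{XY}$ reduces to the single scalar equation $\bbE[XY]=\alpha\beta=\rho$, so $\beta=\rho/\alpha$. Without loss of generality we take $\alpha,\beta\ge 0$, and since $\beta\le 1$ we obtain $\alpha\in[\rho,1]$. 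A direct computation then yields
\begin{align}
I(X;W) &= h(X)-h(X|W) = \tfrac{1}{2}\log\tfrac{1}{1-\alpha^2}, \\
I(XY;W) &= h(XY)-h(X|W)-h(Y|W) = \tfrac{1}{2}\log\tfrac{1-\rho^2}{(1-\alpha^2)(1-\beta^2)},
\end{align}
which matches~\eqref{eqn:acs_gauss1}--\eqref{eqn:acs_gauss2} and establishes that the claimed region is contained in $\calC_\Wyner(\pi_{XY})$.

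For the converse, I would argue that Gaussianity of the test channel is optimal. Fix any auxiliary $W$ with $X-W-Y$ and $(X,Y)\sim \pi_{XY}$, and let $\alpha^2:=1-2^{-2I(X;W)}$ so that $I(X;W)=\tfrac{1}{2}\log\tfrac{1}{1-\alpha^2}$; since $I(X;W)\ge I(X;Y)=\tfrac{1}{2}\log\tfrac{1}{1-\rho^2}$ by the data-processing inequality, $\alpha\ge\rho$. By the conditional maximum-entropy principle, $h(X|W)\le \tfrac{1}{2}\log\bigl(2\pi e\,\bbE[\mathrm{Var}(X|W)]\bigr)$, which together with the identification of $1-\alpha^2$ with $2^{2h(X|W)}/(2\pi e)$ lets me define a corresponding $\beta$ satisfying $\alpha\beta=\rho$ (enforced by the marginal constraint) and $\beta=\rho/\alpha$. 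The key step is to bound $I(XY;W)=h(XY)-h(XY|W)=h(XY)-h(X|W)-h(Y|W)$ from below using the Markov structure $X-W-Y$ and the above entropy bounds; this yields the sum-rate bound~\eqref{eqn:acs_gauss2}. Finally, a standard Carath\'eodory-type argument ensures that sweeping $\alpha\in[\rho,1]$ exhausts the boundary of $\calC_\Wyner(\pi_{XY})$.

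The main obstacle will be the converse, specifically making rigorous the optimality of Gaussian auxiliaries. In the discrete case one appeals to the cardinality bound $|\calW|\le|\calX||\calY|+1$ and a convex-hull argument, but for continuous alphabets one instead needs a moment-matching plus maximum-entropy comparison, along the lines used by~\citet{XuLiuChen} for the pure Wyner common information of Gaussians. The delicate point is to pair the two entropy constraints with the covariance constraint $\bbE[XY]=\rho$ in such a way that any non-Gaussian $W$ can be replaced by a Gaussian one with matching first- and second-order statistics without decreasing either $I(X;W)$ or $I(XY;W)$; this is the place where care is required and where one most likely needs to invoke (or adapt) the factorization argument or the rotation-based technique used in Section~\ref{sec:wyner_gauss}.
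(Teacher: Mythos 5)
Your achievability argument --- the Gaussian parametrization $X=\alpha W+\sqrt{1-\alpha^2}\,N_1$, $Y=\beta W+\sqrt{1-\beta^2}\,N_2$ with $\alpha\beta=\rho$ and the mutual-information computations --- is exactly what the monograph gives, and so is the reduction from the operational region $\calT(\pi_{XY})$ to the single-letter region $\calC_{\Wyner}(\pi_{XY})$ via the regularity conditions of \cite[Corollary~2]{YuTan2020b}. Since the monograph does not supply an in-text converse (it is deferred to that citation), the part you reproduce is the part the paper actually provides.

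On the converse, your ``replace a non-Gaussian $W$ by a Gaussian one'' framing is not needed and hides the step that actually matters. Fix any $W$ with $X-W-Y$ and $P_{XY}=\pi_{XY}$, and set $\alpha^2:=1-2^{2h(X|W)}/(2\pi e)$ so that $I(X;W)=\tfrac12\log\tfrac{1}{1-\alpha^2}$, with $\alpha\ge\rho$ by the data-processing inequality. The Markov chain together with the covariance constraint give $\rho=\bbE[XY]=\bbE\bigl[\bbE[X|W]\,\bbE[Y|W]\bigr]$, so Cauchy--Schwarz yields
\begin{equation}
\rho^2 \le \bbE\bigl[(\bbE[X|W])^2\bigr]\,\bbE\bigl[(\bbE[Y|W])^2\bigr]
=\bigl(1-\bbE[\var(X|W)]\bigr)\bigl(1-\bbE[\var(Y|W)]\bigr).
\end{equation}
The conditional maximum-entropy bound gives $2^{2h(X|W)}/(2\pi e)\le\bbE[\var(X|W)]$, hence $\alpha^2\ge 1-\bbE[\var(X|W)]$; combining with the display, $\bbE[\var(Y|W)]\le 1-\rho^2/\alpha^2=1-\beta^2$, and a second application of the maximum-entropy bound gives $h(Y|W)\le\tfrac12\log\bigl(2\pi e(1-\beta^2)\bigr)$. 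Substituting into $I(XY;W)=h(XY)-h(X|W)-h(Y|W)$ with $h(XY)=\tfrac12\log\bigl((2\pi e)^2(1-\rho^2)\bigr)$ gives exactly the sum-rate bound~\eqref{eqn:acs_gauss2}. This Cauchy--Schwarz step through the Markov chain is the concrete ingredient your proposal alludes to as the ``delicate point'' but does not supply; without it the ``moment-matching plus maximum-entropy comparison'' you gesture at does not close.
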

Recall that for Wyner's common information, the optimal distribution $P_W P_{X|W} P_{Y|W}$ takes the form that $W$ is standard Gaussian and $X$ and $Y$ are connected to $W$ as 
\begin{equation}
X=\sqrt{\rho}\, W+\sqrt{1-\rho}\, N_1\quad\mbox{and}\quad Y=\sqrt{\rho}\, W+\sqrt{1-\rho}\, N_2,
\end{equation}
where $N_1$ and $N_2$ are independent standard Gaussian random variables. For the distributed channel synthesis problem, $X$ and $Y$ are not treated symmetrically and so we have to consider a different and more general parametrization of $P_W P_{X|W} P_{Y|W}$.  Similarly to the DSBS  in Section~\ref{sec:ecs_dsbs}, the channels  from $W$ to $X$  and  from $W$ to $Y$   are respectively 
\begin{equation}
X = \alpha\, W+ \sqrt{1-\alpha^2}\, N_1\quad \mbox{and}\quad Y=\beta\, W+\sqrt{1-\beta^2}\, N_2,
\end{equation}
where $\alpha\beta=\rho$. Note that $X$ and $Y$ have zero mean and unit variance. By considering this parametrization and evaluating the mutual information terms $I(W;X)$ and $I(W;XY)$, we obtain the expressions in the lower bounds in~\eqref{eqn:acs_gauss1} and~\eqref{eqn:acs_gauss2}.

Similarly to the case for the exact common information, we do not yet have a complete characterization of the optimal rate region for exact channel synthesis for jointly Gaussian sources. It is clearly the case that $\calC_\Wyner(\pi_{XY})$ constitutes an outer bound to $\calT_\Ex(\pi_{XY})$. To derive the inner bound, we have to verify a set of regularity conditions similar to those in Lemma~\ref{lem:regularity} and  to construct codes such that the conditional R\'enyi divergence of order $\infty$ satisfies
\begin{align}
D_\infty\big(P_{Y^n|X^n}\big\| \pi_{Y|X}^n \big|\tilde{\pi}_{X^n}\big)  = o\Big(\frac{1}{n}\Big), 
\end{align}
where  the {\em truncated} distribution $\tilde{\pi}_{X^n}$ on the $X$-marginal has PDF
\begin{equation}
\tilde{f}_{X^n}(x^n) \propto \bigg(\prod_{i=1}^n f_X (x_i)\bigg)\bone\big\{ x^n\in\calA_\epsilon^{(n)} \big\},
\end{equation}
and where $f_X$ and $\calA_\epsilon^{(n)}$ are the PDF and the $\epsilon$-weakly  typical set of $\pi_X$ respectively.  By using source synthesis codes, one can construct a reliable sequence of approximate channel synthesis codes in 
 the sense that their R\'enyi divergences of order $\infty$ vanish (sufficiently rapidly); this translates to a sequence of codes that guarantees exact channel synthesis.  


\begin{proposition}
For the jointly Gaussian source with correlation coefficient $\rho \in (0,1)$, the optimal rate region for  exact channel synthesis (Definition~\ref{def:ECS}) satisfies 
\begin{align}
 \calT_\Ex^{(\mathrm{in})}(\pi_{XY})\subset \calT_\Ex(\pi_{XY}) \subset \calC_\Wyner(\pi_{XY}), \label{eqn:ecs_gauss0}
\end{align}
where  $\calT_\Ex^{(\mathrm{in})}(\pi_{XY})$ is the set of rate pairs $(R , R_0) \in\bbR_+^2$ satisfying  
\begin{align}
\hspace{-.3in} R &\ge \frac{1}{2}\log\bigg(\frac{1}{1-\alpha^2}\bigg)\label{eqn:ecs_gauss1}\qquad\mbox{and}\\
\hspace{-.3in} R+R_0&\ge  \frac{1 }{2}\log\bigg( \frac{1-\rho^2}{(1\!-\!\alpha^2)(1\!-\!\beta^2)} \bigg)\!+\!\frac{\rho\sqrt{ (1-\alpha^2)(1-\beta^2) }}{1-\rho^2}\log\rme \label{eqn:ecs_gauss2}
\end{align}
for some $\alpha\in [\rho,1]$ and $\beta = \rho/\alpha$. 
\end{proposition}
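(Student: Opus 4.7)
The plan is to separately establish the two inclusions in \eqref{eqn:ecs_gauss0}. The outer bound $\calT_\Ex(\pi_{XY})\subset \calC_\Wyner(\pi_{XY})$ is essentially inherited from the previous proposition, since any variable-length exact-synthesis code, via Pinsker-type considerations, induces a sequence of codes with vanishing TV distance; hence achievability for exact synthesis implies achievability for TV-approximate synthesis, and $\calT_\Ex(\pi_{XY})\subset \calT(\pi_{XY})=\calC_\Wyner(\pi_{XY})$. The regularity conditions needed for this step are exactly those already verified for jointly Gaussian sources in Section~\ref{sec:wyner_gauss}.

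For the inner bound, I would invoke the continuous analogue of Theorem~\ref{thm:ecs_sl}, namely that a rate pair $(R,R_0)$ is achievable whenever there is some Markov chain $X-W-Y$ with $P_{XY}=\pi_{XY}$ satisfying the constraints $R\ge I(W;X)$ and $R+R_0\ge -H(XY|W)+\bbE_{P_W}[\rvH_\infty(P_{X|W},P_{Y|W}\|\pi_{XY})]$. As hinted before the statement, the correctness of this continuous extension requires verifying a Gaussian-tailored regularity condition in the spirit of Lemma~\ref{lem:regularity}, and then constructing truncated i.i.d.\ codes whose conditional R\'enyi divergence of order $\infty$ decays as $o(1/n)$; together with the mixture decomposition technique of Section~\ref{sec:sketch_equiv}, this upgrades the approximate synthesis to exact synthesis.

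Given that machinery, the remaining task is to evaluate the single-letter expression on the canonical Gaussian test distribution. I would take $W\sim\calN(0,1)$ together with $X=\alpha W+\sqrt{1-\alpha^2}\,N_1$ and $Y=\beta W+\sqrt{1-\beta^2}\,N_2$ for independent standard Gaussians $N_1,N_2$, which forces $\alpha\beta=\rho$ to meet $\bbE[XY]=\rho$. A standard computation gives
\begin{equation}
I(W;X)=\tfrac{1}{2}\log\tfrac{1}{1-\alpha^2},\quad H(XY|W)=\tfrac{1}{2}\log\!\big((2\pi e)^2(1-\alpha^2)(1-\beta^2)\big),
\end{equation}
which yields \eqref{eqn:ecs_gauss1}. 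For the sum-rate bound, I would plug the Gaussian conditionals $P_{X|W=w}=\calN(\alpha w,1-\alpha^2)$ and $P_{Y|W=w}=\calN(\beta w,1-\beta^2)$ into the closed-form expression for $\rvH_\infty$ derived in Example~\ref{ex:dsbs_renyi} for Gaussian marginals, obtaining
\begin{equation}
\rvH_\infty(P_{X|W=w},P_{Y|W=w}\|\pi_{XY})=\log\!\big(2\pi\sqrt{1-\rho^2}\big)+\frac{1+\rho\big(\sqrt{(1-\alpha^2)(1-\beta^2)}-\alpha\beta w^2\big)}{1-\rho^2}\log\rme.
\end{equation}
Taking expectation over $W$ (using $\bbE[W^2]=1$ and $\alpha\beta=\rho$), then subtracting $H(XY|W)$, the $2\pi e$ terms telescope to leave precisely
\begin{equation}
\tfrac{1}{2}\log\tfrac{1-\rho^2}{(1-\alpha^2)(1-\beta^2)}+\tfrac{\rho\sqrt{(1-\alpha^2)(1-\beta^2)}}{1-\rho^2}\log\rme,
\end{equation}
matching \eqref{eqn:ecs_gauss2}. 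Taking the union over admissible $\alpha\in[\rho,1]$ (with $\beta=\rho/\alpha$) then exhausts $\calT_\Ex^{(\mathrm{in})}(\pi_{XY})$.

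The principal obstacle is not the single-letter evaluation, which is a mechanical Gaussian computation, but rather the justification of the continuous inner bound: specifically, verifying a sub-exponential growth condition on $\log\kappa_{\epsilon,n}$ for the Gaussian density and showing that truncated i.i.d.\ Gaussian codes can indeed drive the conditional $\infty$-R\'enyi divergence to zero at rate $o(1/n)$, so that the mixture decomposition produces an exact-synthesis code whose asymptotic rate matches the single-letter bound. This requires carefully controlling the tail probabilities of Gaussian typical sets and the corresponding density ratios, analogous to---but more delicate than---the finite-alphabet truncated product-distribution argument sketched in Section~\ref{sec:ach_RCIc}.
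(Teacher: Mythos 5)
Your proposal matches the paper's approach: outer bound inherited from the TV-approximate synthesis region, inner bound via the continuous extension of Theorem~\ref{thm:ecs_sl} together with a Gaussian test channel $X=\alpha W+\sqrt{1-\alpha^2}N_1$, $Y=\beta W+\sqrt{1-\beta^2}N_2$ with $\alpha\beta=\rho$, and your Gaussian single-letter computation of $I(W;X)$, $H(XY|W)$, and $\bbE_{P_W}[\rvH_\infty(P_{X|W},P_{Y|W}\|\pi_{XY})]$ all check out and telescope to the stated bounds. Two small remarks: the closed-form $\rvH_\infty$ you invoke is actually from the unlabeled Gaussian example following Example~\ref{ex:dsbs_renyi} (which itself is the DSBS case); and the paper is more specific about the form of the vanishing-divergence requirement in the channel-synthesis setting, demanding $D_\infty\big(P_{Y^n|X^n}\big\| \pi_{Y|X}^n \big|\tilde{\pi}_{X^n}\big)=o(1/n)$ conditioned on the \emph{truncated} $X$-marginal $\tilde{\pi}_{X^n}$ rather than on the joint distribution -- this is what couples the $I(W;X)$ constraint to the truncation argument, and your sketch glosses over that distinction even though the spirit is right.
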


\enlargethispage{-\baselineskip}
The additional term in the inequality in~\eqref{eqn:ecs_gauss2} (over the one in~\eqref{eqn:acs_gauss2}) is analogous to the additional term of $(\rho\log\rme)/(1+\rho)$ of the upper bound on the exact common information for jointly Gaussian sources in Proposition~\ref{prop:gauss_exact}. By the intuition gleaned from the DSBS in Proposition~\ref{prop:dsbs_ecs} (in which $ \calT_\Ex(\pi_{XY})$ was  characterized exactly) and the type overflow phenomenon, we conjecture that the inner bound  $\calT_\Ex^{(\mathrm{in})}(\pi_{XY})$ is tight and there is a gap between $ \calT_\Ex(\pi_{XY})$ and $\calC_\Wyner(\pi_{XY})$, i.e., that $\calT_\Ex^{(\mathrm{in})}(\pi_{XY}) =  \calT_\Ex(\pi_{XY}) \subsetneq  \calC_\Wyner(\pi_{XY})$.  The inner and outer bounds on $\calT_\Ex(\pi_{XY})$ for a jointly Gaussian source with $\rho = 0.5$ is shown in Fig.~\ref{fig:exact_syn_gauss}.

\begin{figure}[!ht]
\centering
\begin{overpic}[width=.8\columnwidth]{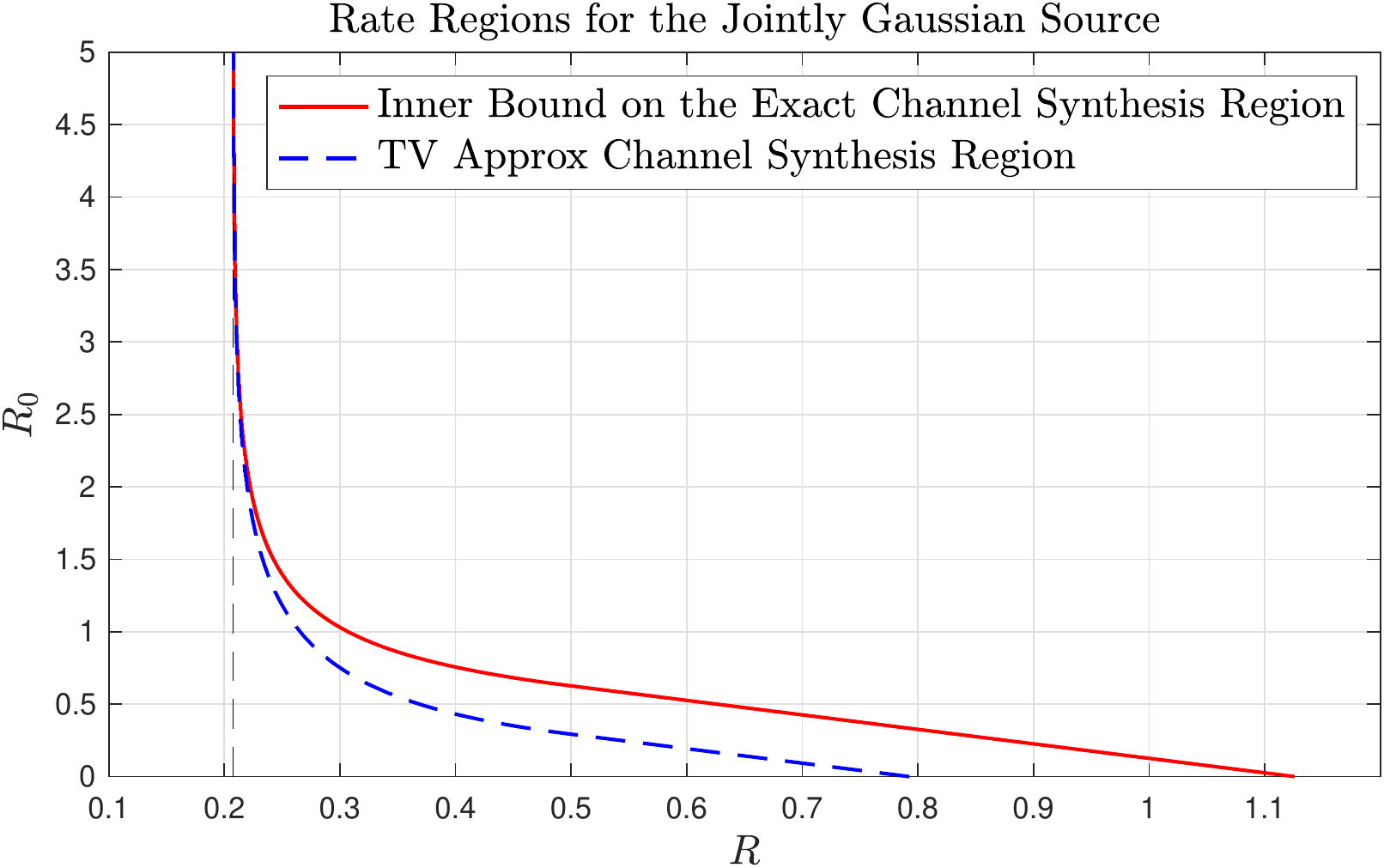}
\put(17,6.5){\circle*{1}}

\put(21,9){\vector(-2,-1){4}}

\put(67,6.5){\circle*{1}}
\put(67,27){\vector(0,-1){20}}
{\footnotesize
\put(21,8){$I(X;Y)$}
\put(63,29){$C_{\Wyner}(\pi_{XY})$}
}
\vspace{.1in}
\end{overpic}
\caption{Plots of the TV approximate synthesis rate region and the inner bound to the  exact channel synthesis region for the jointly Gaussian source with correlation coefficient~$\rho=0.5$. The boundary of the optimal rate region for exact channel synthesis lies between lines. We conjecture that it coincides with the inner bound.}
\label{fig:exact_syn_gauss}
\end{figure}

Finally, for a jointly Gaussian source $(X,Y)\sim\pi_{XY}$ with correlation coefficient $\rho\in [0,1)$, the mutual information between  $X$ and $Y$ is 
\begin{equation}
I_\pi(X;Y) =  \frac{1}{2}\log \frac{1}{1-\rho^2}.
\end{equation}
 Hence, in this case, under the condition that the sequence of communication rates approaches $I_\pi(X;Y)$ asymptotically, the minimum rate of shared randomness for exact channel synthesis 
\begin{equation}
T_0^*(I_\pi(X;Y))\ge\inf\big\{ R_0: (I_\pi(X;Y), R_0) \in \calC_{\Wyner}(\pi_{XY}) \big\}=\infty.
\end{equation}
This is attained when $\alpha\downarrow\rho$ and $\beta \uparrow 1$ in~\eqref{eqn:acs_gauss2}.  The same is true for TV approximate channel synthesis. Hence, for a Gaussian source, if the rate of shared randomness $R_0$ is finite, it is impossible to realize either exact or approximate channel synthesis unless the asymptotic communication rate $R$ is strictly larger than $I_\pi(X;Y)$.   
 This can also be seen from the vertical asymptote in Fig.~\ref{fig:exact_syn_gauss}.

\chapter{Common Information and Nonnegative Rank}\label{ch:nr}
This section completes our discussion of the extensions and generalizations of Wyner's common information. Instead of focusing on   coding-inspired operational interpretations of various common information measures, we describe   somewhat surprising connections between these measures and a fundamental problem in  numerical linear algebra, signal processing, and machine learning, known as {\em nonnegative matrix factorization} or NMF. The NMF problem was popularized  in a landmark paper by \citet{LS99} and has received significant attention since its inception. It has numerous applications to audio signal processing,  hyperspectral imaging, bioinformatics, and text clustering, among others.  See the excellent books by \citet{gillisBook}  and \citet{Cichocki09} for overviews.  

Simply put, in NMF, one is given a nonnegative matrix $\bM\in \bbR_+^{m\times k}$ and one is required to find a factorization of $\bM$ into two nonnegative matrices $\bU\in\bbR_+^{m\times r}$ and $\bV\in\bbR_+^{r\times k}$  such that $\bM$ is {\em exactly} or {\em approximately} equal to the product $\bU \bV$, i.e., 
\begin{equation}
  \bM=\bU\bV\quad\mbox{or}\quad\bM\approx\bU\bV. \label{eqn:nmf_equal_approx}
  \end{equation}  The matrices $\bU$ and $\bV$ are usually referred to as the {\em dictionary} and {\em coefficient} matrices respectively and the minimum $r$ such that there exists $\bU$ and $\bV$ such that $\bM = \bU\bV$ is known as the {\em nonnegative rank} of $\bM$. The study of NMF in machine learning and signal processing usually consists in developing algorithms to find $\bU$ and $\bV$ efficiently and accurately. En route, one typically uses heuristic (e.g., Bayesian) methods~\cite{TF12} to find good approximations of the nonnegative rank. In this section, however, we are concerned with {\em exact} factorizations  and we discuss information-theoretic interpretations of the nonnegative rank. We will see that the nonnegative rank is intimately connected to several common information quantities  that we encountered in the previous sections (such as Wyner's common information and the exact common information).  

\section{Nonnegative Rank}
We now formally define the nonnegative rank  based on the minimal number of  rank one factors that sum to the given matrix $\bM$. 

\begin{definition} \label{def:nn_rank}
The {\em nonnegative rank}  of a nonnegative matrix $\bM \in \bbR_+^{m\times k}$, denoted as  $\rank_+(\bM)$, is the smallest integer $r$ such that $\bM$ can be represented as 
\begin{equation}
\bM = \sum_{w=1}^r \bu_w \bv_w^\top \label{eqn:rank_one}
\end{equation}
for some nonnegative vectors $\bu_w\in\bbR_+^m$ and $\bv_w\in\bbR_+^k$. Here $\bu_w$ and $\bv_w^\top$ respectively represent the $w^{\mathrm{th}}$ column of $\bU$ and $w^{\mathrm{th}}$ row of $\bV$ where $\bU$ and $\bV$ are the dictionary and coefficient matrices in \eqref{eqn:nmf_equal_approx}.  
\end{definition}
As shown by \citet{Vavasis}, the computation of the nonnegative rank is NP-hard. See \citet{Moitra} for some  positive results. For example, checking whether the nonnegative rank is equal to a fixed  value $r$ (not part of the input) can be done in polynomial time in the dimensions of the input matrix, namely in time $O( (mk)^{r^2} )$. The nonnegative rank is of tremendous significance in computational complexity and combinatorial optimization. Of particular importance is  the fundamental {\em factorization theorem} of \citet{Yannakakis} which states that the nonnegative rank of the slack matrix of a polytope equals its extension complexity  (i.e., the minimum number of facets in a higher-dimensional polytope from which the original one can be obtained as a  linear  projection). We will not delve into such issues in this section as we focus on various other interesting information-theoretic interpretations of the nonnegative rank.  

We note that the usual (linear) rank is a trivial lower bound to the nonnegative rank as the vectors $\bu_w$ and $\bv_w$ are no longer constrained to be nonnegative, i.e., 
\begin{equation}
\rank(\bM)\le\rank_+(\bM). \label{eqn:rank_ineq}
\end{equation}
It is known from~\citet[Theorem 4.1]{cohenR} that if $\bM$ has rank at most two, this inequality is tight. However, this inequality is not tight in general as the following example from~\cite{cohenR} shows. 
\begin{example} \label{ex:nn_rank}
Let 
\begin{equation}
\bM = \begin{bmatrix}
1 & 1 & 0 & 0 \\
1 & 0 & 1 & 0 \\
0 & 1 & 0 & 1 \\
0 & 0 & 1 & 1 
\end{bmatrix}.
\end{equation}
A direct computation shows that $\rank(\bM)=3$. We now claim that $\rank_+(\bM)=4$. On the one hand, the nonnegative rank of any matrix of size $m$ by $k$ cannot exceed $\min\{m,k\}$, which is $4$ in this case. To justify the lower bound, we call a pair of entries $M_{a,b}$ and $M_{c,d}$ {\em pairwise independent} if $M_{a,b}M_{c,d}>0$ and $M_{a,d} M_{c,b}=0$. Then by~\eqref{eqn:rank_one}, we see that if $\bM$ contains a set of $q$ pairwise independent entries, then  $\rank_+(\bM)\ge q$; see  \citet[Section~3.4.1]{gillisBook} for a detailed justification of this fact. In this example, the entries $M_{1,1}$, $M_{2,3}$, $M_{3,2}$, and $M_{4,4}$	 are pairwise independent, so  $\rank_+(\bM)\ge 4$. Hence, $\rank_+(\bM)=4$, which is strictly larger than the rank of~$\bM$.
\end{example}

In fact, it is known~\cite{Beasley} that the nonnegative rank can be arbitrarily larger than the rank. A canonical example is the family of {\em distance matrices}. 
\begin{example}
For a set of real numbers $\{a_1, \ldots, a_m\}\subset\bbR$, let $\bM=\bM(a_1,\ldots, a_m)$ be the $m\times m$ symmetric matrix
\begin{equation}
\bM = \begin{bmatrix}
0  & (a_1-a_2)^2 &(a_1-a_3)^2 & \dots & (a_1-a_m)^2\\
(a_2-a_1)^2  & 0 &(a_2-a_3)^2 & \dots & (a_2-a_m)^2\\
\vdots & \vdots &\vdots & \ddots & \vdots \\
(a_m-a_1)^2  & (a_m-a_2)^2  &(a_m-a_3)^2 & \dots & 0 
\end{bmatrix}.
\end{equation}
Thus, the $(i,j)^{\mathrm{th}}$ entry of $\bM$ is the square of the  distance between $a_i $ and $a_j$. For obvious reasons, $\bM$ is known as a {\em distance matrix}. Let 
\begin{equation}
\bB  := \begin{bmatrix}
a_1^2  & 1 & -2a_1 \\
a_2^2  & 1 & -2a_2 \\
\vdots & \ddots & \vdots\\
a_m^2 & 1 & -2a_m
\end{bmatrix} \quad \mbox{and}\quad \bC  := \begin{bmatrix}
1  & 1 & \dots & 1 \\
a_1^2  & a_2^2 & \dots & a_m^2 \\
a_1 & a_2 &\dots & a_m
\end{bmatrix} .
\end{equation}
Then $\bM =\bB\bC$ so the rank of $\bM$ is at most $3$. If $|\{a_1, a_2, \ldots, a_m\}|\ge 3$, then the rank of $\bM $ is exactly $3$. However, \citet{Beasley} showed that $\rank_+(\bM)=\Omega(\log m)$, so the gap between $\rank(\bM)$ and $\rank_+(\bM)$ is arbitrarily large as $m\to\infty$. Also see the work of \citet{Hrubes} who showed that $\rank_+(\bM)\le 2\log m+2$. We mention in passing that there is yet another notion of rank known as the {\em positive semidefinite (PSD) rank}. The PSD rank of distance matrices is $2$ \cite{Fawzi2015}.
\end{example}

Most of the existing lower bounds on the nonnegative rank are based only on the {\em support} of the matrix, i.e., the sparsity   pattern of the entries as in Example~\ref{ex:nn_rank}. See~\citet{Braun17} and \citet[Chapter~3]{gillisBook} for reviews and \citet{FawziParrilo} for an interesting exception using   norm-based methods. The  sole utilization  of the support has obvious shortcomings as the values of the elements of~$\bM$ are ignored. In the rest of this section, we take a deeper look at the nonnegative rank from a common information-theoretic perspective. 

\section{Wyner's Common Information as Amortized Nonnegative Rank}\label{sec:7.2}
In this section, we describe a  connection between Wyner's common information and the  nonnegative rank of nonnegative matrices. This connection was discovered by \citet{Braun17}, \citet{Braun13}, and \citet{Jain13}. 

To make this connection, for a nonnegative matrix $\bM \in\bbR_+^{m\times k}$, we define its {\em induced distribution} as 
\begin{equation}
\pi_{XY}(x,y) := \frac{M_{x,y}}{\|\bM\|_1} \quad\mbox{for all}\;\, (x,y)\in [m] \times [k],\label{eqn:induced_dist}
\end{equation}
where $\|\bM\|_1 := \sum_{x,y} M_{x,y}$ denotes the $\ell_1$ norm or the  sum of the (absolute values of the) elements of~$\bM$.  Note that the distribution $\pi_{XY}$ defined in \eqref{eqn:induced_dist} is   valid     because $\bM$ is nonnegative and, with the normalization by $\|\bM\|_1$, $\pi_{XY}$ sums to unity. Furthermore, we deliberately use the symbol $\pi_{XY}$  to draw an analogue to the target distribution discussed in Sections~\ref{ch:wynerCI} and~\ref{ch:renyi}--\ref{ch:ecs}. In this section, we write $\calX = [m]$ and $\calY=[k]$ to denote the finite alphabets of $X$ and $Y $ respectively. 

A discrete random variable $W$ with support (or alphabet) $\calW$ is said to be a {\em seed} for the pair of random variables $(X,Y)\sim \pi_{XY}$,  or equivalently the matrix $\bM$, if $X$ and $Y$ are conditionally independent given $W$. Given $\bM$ and a seed $W$ for $\bM$, define the collection of matrices $\{\bM_w:w\in\calW\}$, each with elements 
\begin{equation}
[\bM_w]_{x,y} := \Pr(X=x, Y=y, W=w) \|\bM\|_1 \label{eqn:defbMw}.
\end{equation}
Every seed $W$ for $(X,Y)$ induces an NMF by writing $\bM = \sum_{w} \bM_w$ since $\bM_w$ is rank one (a consequence of the Markov chain $X-W-Y$). In Section~\ref{ch:wynerCI}, we   referred to $W$, the seed, as the {\em common random variable} in the definition of Wyner's common  information.  

We also note that every NMF of $\bM=\sum_w \bM_w=\sum_w \bu_w\bv_w^\top$ induces a seed $W$ for $\bM$ by extending the induced distribution $\pi_{XY}$ via
\begin{equation}
\Pr(X=x,Y=y,W=w) :=\frac{[\bM_w]_{x,y}}{\|\bM\|_1}.
\end{equation}
By virtue of the fact that $\bM_w=\bu_w\bv_w^\top$ for every $w$, we see that $X$ and $Y$ are conditionally independent given $W$.  Due to the    connection between a nonnegative matrix~$\bM$ and its induced distribution $\pi_{XY}$ in~\eqref{eqn:induced_dist}, we can define {\em Wyner's common information for $\bM$} as 
\begin{equation}
C_\Wyner(\bM) := C_\Wyner(\pi_{XY}).
\end{equation}

We start with a simple observation due to \citet{Jain13} and \citet{Braun13} which reinforces the definitions above. 
\begin{proposition} \label{prop:wyner_nnr}
Wyner's common information of $\bM\in\bbR_+^{m\times k}$ is upper bounded by the logarithm of the nonnegative rank of $\bM$, i.e., 
\begin{equation}
C_\Wyner(\bM)\le \log\rank_+(\bM).  \label{eqn:nn_rank_bd}
\end{equation}
\end{proposition}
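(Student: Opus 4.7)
The plan is to construct an explicit seed $W$ directly from a minimal nonnegative factorization and then bound $I(XY;W)$ by $H(W) \le \log r$. Let $r = \rank_+(\bM)$ and fix a decomposition $\bM = \sum_{w=1}^{r} \bu_w \bv_w^{\top}$ with $\bu_w \in \bbR_+^{m}$, $\bv_w \in \bbR_+^{k}$, as guaranteed by Definition~\ref{def:nn_rank}.

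First I would take $\calW = [r]$ and define a joint distribution on $\calX \times \calY \times \calW$ by
\begin{equation}
P_{XYW}(x,y,w) := \frac{[\bu_w \bv_w^{\top}]_{x,y}}{\|\bM\|_1} = \frac{u_{w,x} \, v_{w,y}}{\|\bM\|_1}.
\end{equation}
Summing this expression over $w$ recovers $\pi_{XY}(x,y) = M_{x,y}/\|\bM\|_1$, so the $(X,Y)$-marginal matches the induced distribution of $\bM$. Moreover, for each $w$ with $P_W(w) > 0$, the conditional $P_{XY|W}(\cdot,\cdot|w)$ is proportional to the rank-one matrix $\bu_w \bv_w^{\top}$ and hence factors as $P_{X|W}(\cdot|w) \, P_{Y|W}(\cdot|w)$ with
\begin{equation}
P_{X|W}(x|w) = \frac{u_{w,x}}{\sum_{x'} u_{w,x'}}, \quad P_{Y|W}(y|w) = \frac{v_{w,y}}{\sum_{y'} v_{w,y'}}.
\end{equation}
Thus $X - W - Y$ forms a Markov chain, which means $P_{XYW}$ is feasible for the infimum defining $C_{\Wyner}(\bM) = C_{\Wyner}(\pi_{XY})$ in~\eqref{eqn:CWyner_inf}.

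Next I would bound the mutual information of this feasible distribution. Since $W$ takes values in the finite set $[r]$, we have the standard entropy bound $H(W) \le \log|\calW| = \log r$, and since mutual information is dominated by the marginal entropy,
\begin{equation}
I(XY;W) \le H(W) \le \log r = \log \rank_+(\bM).
\end{equation}
Taking the infimum over feasible couplings only decreases the left-hand side, yielding $C_{\Wyner}(\bM) \le \log \rank_+(\bM)$, as required.

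There is essentially no obstacle here: the construction is a clean translation between two equivalent viewpoints on a rank-one decomposition (sums of outer products versus conditional product distributions), and the final bound uses nothing beyond $I(XY;W) \le H(W) \le \log |\calW|$. The only point to be careful about is handling indices $w$ with $P_W(w) = 0$ (i.e.\ $\bu_w = 0$ or $\bv_w = 0$); these can be discarded from the factorization without affecting $\bM$, so the effective support of $W$ still has cardinality at most $r$.
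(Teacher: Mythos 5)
Your proof is correct and takes essentially the same approach as the paper: the joint $P_{XYW}(x,y,w) = [\bu_w]_x[\bv_w]_y/\|\bM\|_1$ you write down directly is identical to the one the paper builds by first specifying the conditional $P_{W|XY}(w|x,y) = [\bu_w]_x[\bv_w]_y/M_{x,y}$ and then multiplying by $\pi_{XY}$, so the two derivations differ only in presentation. The remaining steps (verifying the $(X,Y)$-marginal, factoring $P_{XY|W}$ to get the Markov chain $X-W-Y$, and bounding $I(XY;W) \le H(W) \le \log r$) match the paper exactly, and your note about discarding zero-mass indices $w$ is a harmless refinement.
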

\begin{proof}
Let $\bM$ have an NMF given by 
\begin{equation}
\bM=\sum_{w\in\calW} \bu_w\bv_w^\top. \label{eqn:given_NMF}
\end{equation}
  Define the seed or common random variable $W$ with conditional distribution $P_{W|XY}$ as 
\begin{equation}
P_{W|XY}(w|x,y) =\left\{ \begin{array}{cc}
\displaystyle\frac{[\bu_w]_x [\bv_w]_y}{M_{x,y}}  & M_{x,y}>0 \vspace{0.03 in} \\
\mbox{arbitrary} & M_{x,y}=0 
\end{array}  \right.  . \label{eqn:PWXY}
\end{equation}
This is a valid conditional distribution because for every $(x,y)$ such that $M_{x,y}>0$,
\begin{equation}
\sum_{w\in\calW} P_{W|XY}(w|x,y)=\sum_{w\in\calW}  \frac{[\bu_w]_x [\bv_w]_y}{M_{x,y}}  = \frac{1}{M_{x,y}}\sum_{w\in\calW} [\bu_w]_x [\bv_w]_y=1,
\end{equation}  where the last equality follows from \eqref{eqn:given_NMF}. 
Define the joint distribution $P_{WXY} := P_{W|XY}\pi_{XY}$, where $\pi_{XY}$ is the induced distribution of $\bM$.  By construction, $(W,X,Y)\sim P_{WXY}$  satisfies $P_{XY}=\pi_{XY}$. Furthermore, by combining \eqref{eqn:induced_dist} and~\eqref{eqn:PWXY}, we obtain
\begin{equation}
P_{XY|W}(x,y|w) = \frac{[\bu_w]_x [\bv_w]_y}{\sum_{x',y'} [\bu_w]_{x'} [\bv_w]_{y'}}, 
\end{equation}
which, for every fixed $w$, is clearly  a product distribution (cf.\ Definition~\ref{def:pseudo_pdt}(a)). Hence, $X-W-Y$ forms a Markov chain. To complete the proof, recall that 
\begin{equation}
C_\Wyner(\bM)=C_\Wyner(\pi_{XY}) = \min_{P_{W}P_{X|W}P_{Y|W}: P_{XY}=\pi_{XY}} I_P(XY;W).
\end{equation}
Hence, by choosing a  minimal factorization of $\bM$ in \eqref{eqn:given_NMF} (i.e., one that has the smallest $|\calW|$), one has 
\begin{equation}
C_\Wyner(\bM)\le I_P(XY;W)\le H(W)\le \log |\calW| = \log\rank_+(\bM),
\end{equation}
completing the proof of \eqref{eqn:nn_rank_bd}.
\end{proof}
One natural question arising from Proposition \ref{prop:wyner_nnr} concerns the tightness of the bound in \eqref{eqn:nn_rank_bd}. This bound can be arbitrarily loose as the following example from \citet{Braun17} demonstrates.  Our justification of the upper   bound on $C_\Wyner(\bM)$ differs from~\cite{Braun17} and, in particular, does not use require the notion of {\em rectangle covering}~\cite{Yannakakis}. 
\begin{example}
For a fixed natural number $m$, let $\bM \in \bbR_+^{m\times m}$ be the   diagonal matrix with diagonal elements $M_{i,i}= 2^i /\sum_{ j \in [m] } 2^j$ for $i \in [m]$. Then, it is clear that $\rank_+(\bM)=m$. However, Wyner's common information for this matrix $\bM$, which is normalized, can be bounded as
\begin{align}
C_\Wyner(\bM )
&\le H_\pi(XY)  \label{eqn:wyner_XY}\\
&= H(\pi_X) \label{eqn:XY_det}\\
 &= H\left( \frac{2}{\sum_{j \in [m]} 2^j} , \frac{2^2}{\sum_{j \in [m]} 2^j}, \ldots, \frac{2^m}{\sum_{j \in [m]} 2^j}\right)\\
 &= -\sum_{i \in [m]} \frac{2^i}{\sum_{j \in [m]} 2^j }\log\bigg(\frac{2^i}{\sum_{j \in [m]} 2^j }\bigg),\label{eqn:ent_m}
\end{align}
where \eqref{eqn:wyner_XY} follows because $I(XY;W)\le H_\pi(XY)$ and \eqref{eqn:XY_det} follows because $X=Y$ in the joint distribution $\pi_{XY}$ induced by $\bM$. The final expression in~\eqref{eqn:ent_m} can be shown to be no larger than $2$ for all $m$ (and in fact converges to $2$ as $m\to\infty$). Thus,  $C_\Wyner(\bM ) \le 2$ for all $m\in\bbN$ and the gap between $C_\Wyner(\bM ) $ and $\log\rank_+(\bM)=\log m$ in~\eqref{eqn:nn_rank_bd} can be made arbitrarily large as~$m$ tends to infinity. 
\end{example}

This somewhat pathological phenomenon can, however, be remedied by considering  small $\ell_1$ perturbations of the $n$-fold Kronecker power of the given matrix $\bM$. In this case, the limit of the normalized logarithm of the nonnegative rank of the perturbed matrix can be shown to be upper bounded by Wyner's common information of $\bM$.  This fundamental result is due to \citet{Braun17} who used the term {\em amortization} to describe the perturbation and limiting operations. 
\begin{theorem}[Amortized nonnegative rank and Wyner's common information]\label{thm:nn_perturb}
Let $\bM \in\bbR_+^{m\times k}$ be a matrix with  $\|\bM\|_1= \sum_{x,y} M_{x,y} =\ell$. Then for any $\epsilon>0$ and $\delta\in (0,1)$, for every 
\begin{equation}
n\ge \max\left\{ \Omega\bigg(\frac{\log^2 (mk)}{\epsilon^2 C_\Wyner(\bM)^2} \log\Big(\frac{1}{\delta} \Big) \bigg),  \, \Omega \bigg(\frac{\delta}{\epsilon}  \bigg)  \right\}  ,
\end{equation}
there exists a nonnegative matrix $\bM_{\epsilon,\delta,n} \in\bbR_+^{m^n \times k^n}$ with 
\begin{align}
\frac{1}{n}\log\rank_+(\bM_{\epsilon,\delta,n} )\le (1+\epsilon)\ C_\Wyner(\bM)+ O\Big( \delta^3 \log\frac{1}{\delta}\Big)\ \frac{\log n}{n}
\end{align}
and 
\begin{align}
\big\|\bM^{\otimes n}  -\bM_{\epsilon,\delta,n}\big\|_1\le\delta \ell^n. \label{eqn:tv_M}
\end{align}
In particular,  for every $\delta\in (0,1)$,  one has  
\begin{equation}
 \lim_{\epsilon\downarrow 0}  \lim_{n\to\infty}\frac{1}{n}\log\rank_+(\bM_{\epsilon,\delta,n} )=C_\Wyner(\bM). \label{eqn:l1_perturbed_nnr}
\end{equation}
\end{theorem}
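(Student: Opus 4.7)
The plan is to translate the matrix-factorization problem into the distributed source simulation problem established in Section~\ref{ch:wynerCI}. Let $\pi_{XY}(x,y) := M_{x,y}/\ell$ be the distribution induced by $\bM$, so by definition $C_\Wyner(\bM) = C_\Wyner(\pi_{XY})$. By Theorem~\ref{thm:wynerCI_sim} combined with the exponentially strong TV-achievability in Theorem~\ref{thm:tv_ci}(b), for any rate $R > C_\Wyner(\pi_{XY})$ there exists a sequence of $(n,R)$-fixed-length distributed simulation codes $\{(P_{X^n|M_n},P_{Y^n|M_n})\}_{n\in\bbN}$ with $M_n\sim\mathrm{Unif}(\calM_n)$, $|\calM_n|\le 2^{nR}$, whose synthesized output $P_{X^nY^n}$ satisfies $|P_{X^nY^n}-\pi_{XY}^n|\le 2^{-n\gamma(R-C_\Wyner(\bM))}$ for a positive error exponent $\gamma$ that behaves linearly near the critical rate. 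I would use these codes directly to build the matrix $\bM_{\epsilon,\delta,n}$.

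Concretely, I would fix $R=(1+\epsilon/2)C_\Wyner(\bM)$, select a simulation code at this rate, and define $\bM_{\epsilon,\delta,n}\in\bbR_+^{m^n\times k^n}$ by
\begin{equation}
[\bM_{\epsilon,\delta,n}]_{x^n,y^n} := \ell^n P_{X^nY^n}(x^n,y^n) = \frac{\ell^n}{|\calM_n|}\sum_{m\in\calM_n}P_{X^n|M_n}(x^n|m)P_{Y^n|M_n}(y^n|m).
\end{equation}
The right-hand side exhibits $\bM_{\epsilon,\delta,n}$ as an explicit sum of at most $|\calM_n|$ nonnegative rank-one matrices, so $\rank_+(\bM_{\epsilon,\delta,n})\le|\calM_n|\le 2^{nR}$, which delivers $\frac{1}{n}\log\rank_+(\bM_{\epsilon,\delta,n})\le(1+\epsilon/2)C_\Wyner(\bM)$. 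Since $[\bM^{\otimes n}]_{x^n,y^n}=\ell^n\pi_{XY}^n(x^n,y^n)$, the $\ell_1$ error factors as $\|\bM^{\otimes n}-\bM_{\epsilon,\delta,n}\|_1 = 2\ell^n|P_{X^nY^n}-\pi_{XY}^n|$. Forcing this below $\delta\ell^n$ amounts to requiring $2\cdot 2^{-n\gamma(R-C_\Wyner(\bM))}\le\delta$; solving for $n$ and using that $\gamma$ is proportional to the gap $R-C_\Wyner(\bM)=\tfrac{\epsilon}{2}C_\Wyner(\bM)$, together with an additional polylogarithmic dependence on $|\calX||\calY|=mk$ coming from type-counting bounds in the soft-covering exponent, yields the stated sufficient condition $n\ge\Omega(\epsilon^{-2}C_\Wyner(\bM)^{-2}\log(1/\delta)\log^2(mk))$. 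The limit \eqref{eqn:l1_perturbed_nnr} then follows as an \emph{upper} bound by sending $n\to\infty$ and then $\epsilon\downarrow 0$; the matching \emph{lower} bound comes from applying Proposition~\ref{prop:wyner_nnr} to any sequence of matrices close to $\bM^{\otimes n}$ in $\ell_1$, combined with the exponentially strong converse Theorem~\ref{thm:tv_ci}(c), which forbids any simulation code of normalized log-rank strictly below $C_\Wyner(\bM)$ from producing a distribution within TV distance $\delta$ of $\pi_{XY}^n$ when $\delta$ is taken to zero, together with continuity of $C_\Wyner$ under small TV perturbations of the induced distribution.

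The main obstacle is the precise logarithmic correction $O(\delta^3\log(1/\delta))\frac{\log n}{n}$ appearing in the rank bound, which is sharper than what a vanilla soft-covering analysis yields. I expect this to arise from a two-stage mixture-decomposition construction in the spirit of the argument sketched in Section~\ref{sec:sketch_equiv}: write $P_{X^nY^n}$ as $(1-\alpha_n)\, Q_n^{\mathrm{good}} + \alpha_n\, Q_n^{\mathrm{bad}}$ where $Q_n^{\mathrm{good}}$ comes from the near-optimal soft-covering codebook at rate $(1+\epsilon/2)C_\Wyner(\bM)$ and $Q_n^{\mathrm{bad}}$ handles a residual $\delta$-fraction via a completely lossless code costing $\log(mk)$ bits per symbol, with $\alpha_n=O(\delta^3\log(1/\delta))$. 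After dividing by $n$ the lossless piece contributes only $O(\log n/n)$ to the rank exponent in a soft sense, producing exactly the claimed correction. Writing out explicit one-shot bounds on the soft-covering error and tracking prefactors through a Pinsker or Fano-type step is routine but tedious, so the sharp quantitative dependence on $\epsilon$ and $\delta$ is the main technical hurdle; the qualitative limit \eqref{eqn:l1_perturbed_nnr} itself requires only the cleaner argument in the preceding paragraph.
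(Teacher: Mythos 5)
Your high-level blueprint is correct and is, in essence, the same route that \citet{Braun17} and the paper's own sketch take: map $\bM$ to its induced distribution $\pi_{XY}=\bM/\ell$, build $\bM_{\epsilon,\delta,n}=\ell^n P_{X^nY^n}$ from a distributed source simulation codebook, and observe that the codebook sum $\frac{1}{|\calM_n|}\sum_m P^n_{X|W}(\cdot|w^n(m))\,P^n_{Y|W}(\cdot|w^n(m))$ exhibits the matrix as a sum of $|\calM_n|$ nonnegative rank-one terms. The identity $\|\bM^{\otimes n}-\bM_{\epsilon,\delta,n}\|_1 = 2\ell^n|P_{X^nY^n}-\pi_{XY}^n|$ is also correct. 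What you call ``soft covering'' is what Braun et al.\ prove from scratch via Chernoff bounds on the log-likelihood ratio $\log P_{XY|W}-\log\pi_{XY}$, so the underlying machinery is the same.

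There are, however, concrete gaps that prevent the proof from closing. First, Theorem~\ref{thm:tv_ci}(b) only says \emph{some} positive exponent $\gamma$ exists for each fixed $R>C_\Wyner$; it does not give $\gamma$ explicitly as a function of the gap, the alphabet size, or anything else, so it cannot be used as a black box to extract the explicit threshold $n\ge\Omega(\log^2(mk)\,\epsilon^{-2}C_\Wyner^{-2}\log(1/\delta))$. To get that, you must open up the soft-covering argument and track constants. Second, you assert the exponent ``behaves linearly near the critical rate,'' but then derive a threshold of the form $\Omega(\epsilon^{-2})$; if $\gamma\propto(R-C_\Wyner)=\tfrac{\epsilon}{2}C_\Wyner$ then $n\ge\log(1/\delta)/\gamma$ gives $\Omega(\epsilon^{-1}C_\Wyner^{-1}\log(1/\delta))$, not $\Omega(\epsilon^{-2}C_\Wyner^{-2}\log(1/\delta))$. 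Reaching the stated $\epsilon^{-2}$ actually requires the exponent to vanish \emph{quadratically} in the rate gap; your own internal arithmetic is inconsistent on this point. Third, the $O(\delta^3\log(1/\delta))\,\log n/n$ correction is not a strengthening over your bound $(1+\epsilon/2)C_\Wyner$; it is an \emph{additional degradation} coming from the explicit finite-$n$ construction, so describing it as ``sharper than vanilla soft-covering'' and explaining it via the mixture-decomposition-plus-lossless-fallback scheme of Section~\ref{sec:sketch_equiv} is off the mark --- that scheme is designed for exact synthesis with variable-length codes, not $\ell_1$-approximate synthesis with fixed-length codebooks, and does not naturally produce a $\log n$ factor. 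Finally, the $\ge$ direction of \eqref{eqn:l1_perturbed_nnr} is more delicate than you let on: converting an NMF of rank $r$ into a distributed simulation code assigns the seed $W$ weights $\|\bu_w\|_1\|\bv_w\|_1/\ell^n$, which are generally \emph{not} uniform, so Definition~\ref{def:wyner_code} and the converse Theorem~\ref{thm:tv_ci}(c) do not directly apply; one needs an intermediate step bounding $H(W)\le\log r$ and then converting to a uniform-seed code (e.g.\ via the intrinsic-randomness / distribution-approximation machinery from Section~\ref{sec:input_dist}), or a perturbation estimate for $C_\Wyner$ of the $n$-letter joint distribution that you have not actually established.
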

Thus, Wyner's common information of $\bM$  (or of $(X,Y)\sim \pi_{XY}$) admits yet another operational interpretation, namely   normalized logarithm of the nonnegative rank  of an $\ell_1$-perturbed version $\bM^{\otimes n}$. This is in addition to its two more familiar operational interpretations in terms of (i) the  minimum rate of common randomness of the Gray-Wyner system when the sum rate is constrained to be no larger than the joint entropy and (ii) the minimum amount of common randomness to simulate a joint source in a distributed manner (cf.\ Section~\ref{ch:wynerCI}).  

The  reader will notice that Theorem~\ref{thm:nn_perturb} is analogous to the fact that the {\em TV common information} (introduced in  Definition~\ref{def:TVCI}) is equal to {\em Wyner's common information} (see \citet{cuff13} and Section~\ref{sec:tvci}). Indeed, the $\ell_1$-relaxation of $\bM^{\otimes n}$ to $\bM_{\epsilon,\delta,n}$ in~\eqref{eqn:tv_M} is analogous to the discrepancy between the target distribution $\pi_{XY}^n$ and the synthesized distribution $P_{X^nY^n}$ in~\eqref{eqn:tv_ci}. So, in some sense, we have ``come full circle'' in this part of the monograph.

The proof of Theorem~\ref{thm:nn_perturb} involves approximating $\pi_{XY}^{n}$, the $n$-fold product of the induced distribution given $\bM$, by a collection of ``better behaving'' distributions so that we can bound the log-likelihood ratio $\log   P_{XY|W  } (X,Y |W)-\log\pi_{XY}(X,Y)$ whose expectation under $(X,Y,W)\sim\pi_{XY}P_{W|XY}$ yields $I(XY;W)$  in the expression for Wyner's common information $C_\Wyner(\pi_{XY})=\min_{X-W-Y} I(XY;W)$. For this purpose, several concentration bounds, such as Chernoff bounds, are used to show that certain well-behaved distributions exist with high probability. As the details are rather involved and delicate, we refer the reader to \citet{Braun17}.
\section{Exact R\'enyi Common Information as Nonnegative Rank}
In Section \ref{sec:7.2}, we related the nonnegative rank of a matrix $\bM$ to its Wyner's common information. Given our discussion of the {\em exact} common information  in Section~\ref{ch:exact}, it is natural to wonder whether the nonnegative rank has any relation to the exact  common information. The purpose of this section is to elaborate on this.   

Recall  from Proposition~\ref{prop:KLE} that the exact common information admits the multi-letter characterization in terms of the {\em common entropy rate} (previously defined in \eqref{eqn:common_ent_rate})  as follows
\begin{equation}
T_\Ex(\pi_{XY}) = \lim_{n\to\infty} \frac{ G(\pi_{XY}^n)}{n}\label{eqn:TEx_comm_ent} ,
\end{equation}
where the {\em common entropy} of $(X,Y)\sim\pi_{XY}$ (previously defined in~\eqref{eqn:common_ent}) is 
\begin{equation}
G(\pi_{XY} ) =  \min_{  P_{W  }P_{X |W }P_{Y |W }:  P_{X Y  } = \pi_{XY }} H(W ) . \label{eqn:cmn_ent}
\end{equation}
We can define the {\em common R\'enyi entropy of order $\alpha \in [0,\infty]$}  as
\begin{equation}
G_\alpha (\pi_{XY}) :=  \min_{  P_{W  }P_{X |W }P_{Y |W }:  P_{X Y  } = \pi_{XY }} H_\alpha(W ) ,\label{eqn:cmn_ent_alpha} 
\end{equation}
where $H_\alpha$ is the R\'enyi entropy of order $\alpha$ (defined in \eqref{eqn:renyi_ent} and \eqref{eqn:renyi_ent2}). Then the exact common information can be generalized to the {\em exact R\'enyi common information of order $\alpha$}, similarly to \eqref{eqn:TEx_comm_ent}, as follows
\begin{equation}
T_\Ex^{(\alpha)}(\pi_{XY}) := \lim_{n\to\infty} \frac{ G_\alpha(\pi_{XY}^n)}{n} . \label{eqn:TEx_comm_ent_alpha}
\end{equation}
The existence of the limit in \eqref{eqn:TEx_comm_ent_alpha} follows by the
subadditivity of the sequence $\{ G_\alpha(\pi_{XY}^n)/n\}_{n\in\bbN}$ and Fekete's lemma~\cite{Fekete}.  Note, by definition, that $T_\Ex^{(1)}(\pi_{XY})=T_\Ex(\pi_{XY})$. Since $\alpha\mapsto H_\alpha (\pi_{XY})$ is non-increasing, $T_\Ex^{(\alpha)}(\pi_{XY}) $ is also non-increasing in $\alpha \in [0,\infty]$. We will be concerned  with $T_\Ex^{(\alpha)}(\pi_{XY}) $ for values of $\alpha \in \{0,1,\infty\}$. The following proposition was shown by the present authors in~\cite{YuTan2020_exact}.

\begin{proposition}\label{prop:ECI_nnr}
We have 
\begin{align}
G_\alpha(\pi_{XY})  = \left\{ \begin{array}{cl}
\log\rank_+(\pi_{XY}) & \alpha=0  \vspace{0.03in} \\
G(\pi_{XY}) & \alpha=1  \vspace{0.03 in} \\
\displaystyle\min_{Q_X, Q_Y} D_\infty(Q_X Q_Y\|\pi_{XY}) & \alpha=\infty
\end{array} \right. .
\end{align}
\end{proposition}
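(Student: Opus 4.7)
The plan is to handle the three cases separately; the $\alpha=1$ case is essentially definitional, the $\alpha=0$ case is a bijective correspondence between decompositions and NMFs, and the $\alpha=\infty$ case requires a short constructive argument that I expect to be the main (albeit still modest) obstacle.

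For $\alpha=1$, I would simply note that $H_{1+s}$ reduces to the Shannon entropy $H$ as $s\downarrow 0$, so the definition of $G_\alpha$ in \eqref{eqn:cmn_ent_alpha} coincides verbatim with that of $G$ in \eqref{eqn:cmn_ent}. For $\alpha=0$, I would use $H_0(W)=\log|\supp(P_W)|$ and exhibit a two-way correspondence. In one direction, any Markov triple $(X,W,Y)$ with $P_{XY}=\pi_{XY}$ yields the NMF $\pi_{XY}(x,y)=\sum_{w\in\supp(P_W)} P_W(w) P_{X|W}(x|w) P_{Y|W}(y|w)$, whose rank-one summands give $\rank_+(\pi_{XY})\le |\supp(P_W)|$. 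In the other direction, mimicking the construction in the proof of Proposition~\ref{prop:wyner_nnr} (equations \eqref{eqn:given_NMF}--\eqref{eqn:PWXY}), any NMF $\pi_{XY}=\sum_{w=1}^r \bu_w\bv_w^\top$ gives a valid $P_WP_{X|W}P_{Y|W}$ with $|\supp(P_W)|\le r$ by setting $P_W(w)=\|\bu_w\|_1\|\bv_w\|_1$ and normalizing $\bu_w$, $\bv_w$. Taking $r=\rank_+(\pi_{XY})$ delivers the matching inequality, giving $G_0(\pi_{XY})=\log\rank_+(\pi_{XY})$.

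For $\alpha=\infty$, I would use $H_\infty(W)=-\log\max_w P_W(w)$, so it suffices to prove
\begin{equation}
\max_{P_WP_{X|W}P_{Y|W}:P_{XY}=\pi_{XY}} \max_w P_W(w) \;=\; \max_{Q_X,Q_Y}\min_{x,y}\frac{\pi_{XY}(x,y)}{Q_X(x)Q_Y(y)} . \nonumber
\end{equation}
The ``$\le$'' direction is direct: fix any decomposition, pick $w^\ast\in\arg\max_w P_W(w)$, and set $(Q_X,Q_Y)=(P_{X|W}(\cdot|w^\ast),P_{Y|W}(\cdot|w^\ast))$; then the identity $\pi_{XY}(x,y)=\sum_w P_W(w)P_{X|W}(x|w)P_{Y|W}(y|w)\ge P_W(w^\ast)Q_X(x)Q_Y(y)$ holding pointwise gives $P_W(w^\ast)\le \min_{x,y}\pi_{XY}/(Q_XQ_Y)$.

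For the ``$\ge$'' direction I would give an explicit construction. Given $(Q_X,Q_Y)$, set $c:=\min_{x,y}\pi_{XY}(x,y)/(Q_X(x)Q_Y(y))$, so the residual matrix $\pi_{XY}-cQ_XQ_Y$ is entrywise nonnegative with total mass $1-c$. Writing $R:=(\pi_{XY}-cQ_XQ_Y)/(1-c)$ yields the decomposition
\begin{equation}
\pi_{XY}(x,y)=c\,Q_X(x)Q_Y(y)+(1-c)\sum_{(x_0,y_0)}R(x_0,y_0)\,\delta_{x_0}(x)\,\delta_{y_0}(y) . \nonumber
\end{equation}
This already gives a valid conditionally-independent decomposition with one atom of weight $c$; the remaining atoms have weights $(1-c)R(x_0,y_0)$, which might exceed $c$. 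The key trick to remove this obstacle is that each point-mass atom $\delta_{x_0}\delta_{y_0}$ can be split into arbitrarily many identical copies, each with weight $(1-c)R(x_0,y_0)/k_{x_0,y_0}$, and by choosing $k_{x_0,y_0}$ large enough we drive every residual weight below $c$. The resulting refined decomposition has $\max_w P_W(w)=c$, giving $G_\infty(\pi_{XY})\le -\log c=D_\infty(Q_XQ_Y\|\pi_{XY})$; minimizing over $(Q_X,Q_Y)$ closes the chain. The only place where real care is needed is verifying this splitting argument cleanly, but it is elementary; no cardinality bound on $\calW$ is required since the definition \eqref{eqn:cmn_ent_alpha} places no such constraint.
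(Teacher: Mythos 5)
Your argument is correct and follows the same route as the paper's proof for all three cases: the $\alpha=1$ case is definitional, the $\alpha=0$ case is the two-way dictionary between conditionally independent decompositions and nonnegative factorizations, and the $\alpha=\infty$ case combines the pointwise-dominance bound with a mixture-decomposition construction. The one place where you work harder than necessary is the ``splitting'' step in the $\alpha=\infty$ construction. Recall that $G_\infty(\pi_{XY})$ is a \emph{minimum} of $H_\infty(W)=-\log\max_w P_W(w)$, so to prove $G_\infty\le -\log c$ it suffices to exhibit a decomposition with $\max_w P_W(w)\ge c$; you do not need to equal $c$. Your unsplit decomposition already has $P_W(w_0)=c$, so $\max_w P_W(w)\ge c$ and hence $H_\infty(W)\le -\log c$ regardless of whether other atoms are heavier -- if they are, $H_\infty$ only drops further, which is only better. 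This is exactly what the paper does: it writes down the one-atom mixture and observes $H_\infty(W)\le\epsilon$, no refinement needed. (As a side remark, when $(Q_X,Q_Y)$ is optimal, your ``$\le$'' direction already forces every $P_W(w)\le c^\ast$, so the splitting is doubly unnecessary at the optimum.) You may simply delete that paragraph; the rest of the proof stands.
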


The first statement  ($\alpha=0$) in Proposition~\ref{prop:ECI_nnr} follows by first noticing that $H_0(W) = \log |\calW|$, where the support of $W$ is $\calW$. Hence, we see  that the exact R\'enyi common information of order $0$ corresponds
to the minimum common randomness rate for exact
generation of the target distribution in which the common
randomness is only allowed to be compressed by {\em fixed-length}
codes.  In contrast to Theorem~\ref{thm:nn_perturb}, this is a {\em one-shot} characterization of the nonnegative rank and no perturbation or limiting operations are needed. The second statement  ($\alpha=1$) follows by comparing the definitions of $G(\pi_{XY})$ and $G_\alpha(\pi_{XY})$  in~\eqref{eqn:cmn_ent} and~\eqref{eqn:cmn_ent_alpha} respectively. 

The final statement  ($\alpha=\infty$) requires a short calculation which we sketch here. Since $H_\infty(W ) = -\log \max_w P_W(w)$, 
\begin{align}
G_\infty(\pi_{XY}) = -\log \max_{  P_{W  }P_{X |W }P_{Y |W }:  P_{X Y  } = \pi_{XY }} \max_w P_W(w).
\end{align}
Swapping the maximization operations, 
\begin{align}
G_\infty(\pi_{XY})  &= -\log\max_w \max_{  P_{W  }P_{X |W }P_{Y |W }:  P_{X Y  } = \pi_{XY }} P_W(w) \\
 &\ge -\log\max_w \max_{ \substack{P_{X|W}P_{Y|W}: \\ P_W(w) P_{X|W}(x|w)P_{Y|W}(y|w)\\ \le \pi_{XY}(x,y) \,\forall (x,y) } } P_W(w) \\
 &\ge \min_w  \min_{P_{X|W=w}, P_{Y|W=w}}D_\infty\big(P_{X|W=w} P_{Y|W=w} \big\| \pi_{XY} \big) \\
 &\ge \min_{Q_X, Q_Y}D_\infty\big( Q_X Q_Y \big\| \pi_{XY} \big).
\end{align}
In the other direction, we let $(Q_X^*, Q_Y^*)$ achieve the minimization in the optimization problem defining $G_\infty(\pi_{XY})$. Let $\epsilon:= D_\infty(Q_X^*Q_Y^*\|\pi_{XY})$. Then by the mixture decomposition technique (as described in Section~\ref{sec:sketch_equiv}),  we see that 
\begin{equation}
\pi_{XY} = 2^{-\epsilon}Q_X^*Q_Y^*+ (1-2^{-\epsilon})P_{\hatX\hatY},
\end{equation}
where $P_{\hatX\hatY} \in\calP(\calX\times\calY)$ is a joint distribution defined as
\begin{equation}
P_{\hatX\hatY} := \left\{  \begin{array}{cl}
\mbox{arbitrary} & \epsilon=0\vspace{.03in}\\
\displaystyle\frac{\pi_{XY}-2^{-\epsilon} Q_X^*Q_Y^* }{1-2^{-\epsilon}} & \epsilon \in (0,\infty)\vspace{.03in}\\
\pi_{XY}& \epsilon=\infty
\end{array} \right. .
\end{equation}
Now, we choose the common random variable $W$ having alphabet $\calW=(\calX\times\calY)\cup\{w_0\}$ where $w_0\notin\calX\times\calY$ and $W$ has distribution 
\begin{equation}
P_W(w) =  \left\{  \begin{array}{cl}
2^{-\epsilon} & w=w_0\vspace{.03in}\\
(1-2^{-\epsilon})P_{\hatX\hatY}(\hatx,\haty) & w=(\hatx,\haty)\in\calX\times\calY
\end{array} \right. .
\end{equation}
We construct $P_{X|W}$ and $P_{Y|W}$ as 
\begin{align}
P_{X|W}(x|w) &=  \left\{  \begin{array}{cl}
Q_X^*(x) & w=w_0\vspace{.03in}\\
\bone\{x=\hatx\} & w=(\hatx,\haty)\in\calX\times\calY
\end{array} \right.  
\quad\mbox{and}\\
P_{Y|W}(y|w) &=  \left\{  \begin{array}{cl}
Q_Y^*(y) & w=w_0\vspace{.03in}\\
\bone\{y=\haty\} & w=(\hatx,\haty)\in\calX\times\calY
\end{array} \right.   .
\end{align}
By construction, the joint distribution $P_W P_{X|W} P_{Y|W}$ satisfies
\begin{align}
P_{XY}=\pi_{XY}\quad\mbox{and}\quad H_\infty(W)\le\epsilon.
\end{align}
Thus, $G_\infty(\pi_{XY}) \le \epsilon= D_\infty(Q_X^*Q_Y^*\|\pi_{XY})$ as desired. 

When we consider the $n$-fold product distribution $\pi_{XY}^n$ (or equivalently the $n$-fold Kronecker product $\pi_{XY}^{\otimes n}$ of the matrix of joint probabilities $\pi_{XY}$), we obtain the following corollary. 
 
 \begin{corollary}[Exact R\'enyi common information]  \label{cor:ECI_nnr}
We have 
\begin{align}
T_\Ex^{(\alpha)}(\pi_{XY})  = \left\{ \begin{array}{cl}
\displaystyle\lim_{n\to\infty}\frac{1}{n}\log\rank_+\big(\pi_{XY}^{\otimes n}\big) & \alpha=0  \vspace{0.03 in} \\
T_\Ex(\pi_{XY}) & \alpha=1  \vspace{0.03 in} \\
\displaystyle\min_{Q_X, Q_Y} D_\infty(Q_X Q_Y\|\pi_{XY}) & \alpha=\infty
\end{array} \right. .
\end{align}
\end{corollary}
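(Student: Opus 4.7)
My plan is to derive Corollary~\ref{cor:ECI_nnr} as a direct consequence of Proposition~\ref{prop:ECI_nnr} together with the limit definition~\eqref{eqn:TEx_comm_ent_alpha} of $T_\Ex^{(\alpha)}(\pi_{XY})$. The first observation is that the matrix of joint probabilities of $\pi_{XY}^n$ (viewed as a distribution on $\calX^n\times\calY^n$) is precisely the $n$-fold Kronecker product $\pi_{XY}^{\otimes n}$, so every single-shot identity in Proposition~\ref{prop:ECI_nnr} carries over by substituting $\pi_{XY}\leftarrow\pi_{XY}^n$.

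For $\alpha=0$, Proposition~\ref{prop:ECI_nnr} yields $G_{0}(\pi_{XY}^n)=\log\rank_+(\pi_{XY}^{\otimes n})$; dividing by $n$ and passing to the limit gives the first case, where the limit exists because $\rank_+(A\otimes B)\le\rank_+(A)\rank_+(B)$ makes $\log\rank_+(\pi_{XY}^{\otimes n})$ subadditive in $n$, so Fekete's lemma applies. For $\alpha=1$, since the Shannon entropy is the $\alpha=1$ R\'enyi entropy, $G_{1}\equiv G$, and hence $T_\Ex^{(1)}(\pi_{XY})=\lim_n G(\pi_{XY}^n)/n$, which is the very definition of $T_\Ex(\pi_{XY})$ in~\eqref{eqn:TEx_comm_ent}.

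The substantive case is $\alpha=\infty$. Proposition~\ref{prop:ECI_nnr} gives the $n$-letter expression $G_{\infty}(\pi_{XY}^n)=\min_{Q_{X^n},Q_{Y^n}} D_{\infty}(Q_{X^n}Q_{Y^n}\|\pi_{XY}^n)$, and the claim is that its normalized limit equals the single-letter quantity $G_{\infty}(\pi_{XY})=\min_{Q_X,Q_Y} D_{\infty}(Q_X Q_Y\|\pi_{XY})$. The upper bound $T_\Ex^{(\infty)}(\pi_{XY})\le G_\infty(\pi_{XY})$ is straightforward: restricting to i.i.d.\ test distributions $Q_{X^n}=Q_X^{\otimes n}$, $Q_{Y^n}=Q_Y^{\otimes n}$ and using the exact additivity $D_\infty(Q_X^n Q_Y^n\|\pi_{XY}^n)=n\,D_\infty(Q_XQ_Y\|\pi_{XY})$---which holds because the pointwise log-ratio $\log(Q_X(x)Q_Y(y)/\pi_{XY}(x,y))$ maximizes independently over coordinates when the reference is a product---yields $G_\infty(\pi_{XY}^n)\le n G_\infty(\pi_{XY})$, and hence the desired upper bound.

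The reverse inequality $T_\Ex^{(\infty)}(\pi_{XY})\ge G_\infty(\pi_{XY})$ is where the main obstacle lies. Writing $\gamma(\pi):=2^{-G_\infty(\pi)}$ for the maximum mass of a product subdistribution dominated by $\pi$, the claim is equivalent to the asymptotic multiplicativity $\gamma(\pi_{XY}^{\otimes n})^{1/n}\to\gamma(\pi_{XY})$. Super-multiplicativity $\gamma(\pi_A\otimes\pi_B)\ge\gamma(\pi_A)\gamma(\pi_B)$ follows immediately by Kronecker-combining optimal product subdistributions, which (together with Fekete's lemma) ensures the limit exists as a supremum; the hard direction is the matching bound $\gamma(\pi_{XY}^{\otimes n})\le\gamma(\pi_{XY})^n$. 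My approach would be a single-letterization argument. Take an optimizer $(Q_{X^n}^{*},Q_{Y^n}^{*})$ achieving $G_\infty(\pi_{XY}^n)=\epsilon_n$; first symmetrize over coordinate permutations (which preserves the pointwise $2^{\epsilon_n}$-bound because $\pi_{XY}^n$ is permutation-invariant); then invoke a finite-$n$ de Finetti-type approximation to represent the symmetrized $Q_{X^n}^{*}$ (and similarly $Q_{Y^n}^{*}$) as a mixture of i.i.d.\ laws up to $O(1/n)$ total-variation error; finally extract from the mixture a single-letter pair $(Q_X,Q_Y)$ witnessing $D_\infty(Q_X Q_Y\|\pi_{XY})\le \epsilon_n/n+o(1)$. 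The delicate step will be propagating an $O(1/n)$-in-TV approximation through a \emph{worst-case} quantity like $D_\infty$: one must argue that the extremizing coordinate configurations lie in the $\pi_{XY}^n$-typical set (where Sanov-type concentration gives tight per-letter control) and that atypical configurations can only contribute $o(1)$ to the per-letter exponent. A cleaner, alternative route would be to reformulate $\gamma(\pi)$ as a matrix-scaling/Fenchel-dual problem in the logarithmic variables and exhibit a dual certificate that tensorizes exactly under Kronecker products, but setting up that duality is itself nontrivial and constitutes the crux of the proof.
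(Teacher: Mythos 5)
Your handling of the $\alpha=0$ and $\alpha=1$ cases is correct and matches the paper: both amount to plugging Proposition~\ref{prop:ECI_nnr} into the limit definition~\eqref{eqn:TEx_comm_ent_alpha}, with Fekete's lemma (via subadditivity of $G_\alpha(\pi_{XY}^n)$) guaranteeing the limit exists. You are also right that $\alpha=\infty$ is the only case requiring actual work, and that the easy direction is $G_\infty(\pi_{XY}^n)\le n\,G_\infty(\pi_{XY})$ by the exact additivity of $D_\infty$ under products.

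The gap is in your superadditivity direction. Both routes you sketch have real problems. The de~Finetti route does not survive the obstacle you yourself flag: an $O(1/n)$-in-TV approximation gives no control whatsoever on $D_\infty$, which is governed by worst-case ratios on possibly tiny-mass events, and the proposed fix (``extremizers live in the typical set'') is false in general---$D_\infty(Q_{X^n}Q_{Y^n}\|\pi_{XY}^n)$ can be achieved on atypical configurations, and nothing in the optimality of $(Q_{X^n}^*,Q_{Y^n}^*)$ forces concentration. Moreover, that route could at best give an asymptotic statement, whereas the result is actually an exact finite-$n$ tensorization. The dual-certificate route you mention remains a sketch.

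The superadditivity you need is in fact much simpler: it follows by conditioning on a worst-case value of the second block, not by de~Finetti. Concretely, write $\pi_1\otimes\pi_2$ for the product source and let $Q_{X_1X_2},Q_{Y_1Y_2}$ achieve $\epsilon:=G_\infty(\pi_1\otimes\pi_2)$, so that
\begin{equation}
Q_{X_1X_2}(x_1,x_2)\,Q_{Y_1Y_2}(y_1,y_2)\le 2^{\epsilon}\,\pi_1(x_1,y_1)\,\pi_2(x_2,y_2)
\end{equation}
on the support. Let $Q_{X_2},Q_{Y_2}$ be the marginals of the optimizers and pick $(x_2^*,y_2^*)$ achieving $D_\infty(Q_{X_2}Q_{Y_2}\|\pi_2)$. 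Set $Q'_{X_1}:=Q_{X_1\mid X_2=x_2^*}$ and $Q'_{Y_1}:=Q_{Y_1\mid Y_2=y_2^*}$. Dividing the display by $Q_{X_2}(x_2^*)Q_{Y_2}(y_2^*)$ yields
\begin{equation}
Q'_{X_1}(x_1)\,Q'_{Y_1}(y_1)\le 2^{\epsilon - D_\infty(Q_{X_2}Q_{Y_2}\|\pi_2)}\,\pi_1(x_1,y_1)
\end{equation}
for all $(x_1,y_1)$ in the support, i.e.\ $G_\infty(\pi_1)\le D_\infty(Q'_{X_1}Q'_{Y_1}\|\pi_1)\le\epsilon-D_\infty(Q_{X_2}Q_{Y_2}\|\pi_2)\le\epsilon-G_\infty(\pi_2)$. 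Hence $G_\infty(\pi_1\otimes\pi_2)\ge G_\infty(\pi_1)+G_\infty(\pi_2)$, which together with the easy direction gives $G_\infty(\pi_{XY}^n)=n\,G_\infty(\pi_{XY})$ exactly, and the $\alpha=\infty$ case follows. This is the ``single-letterization'' the paper refers to; you should replace your de~Finetti sketch with it.
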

We note that the first statement ($\alpha=0$) requires a limiting operation because unlike the linear rank, it is, in general, not true that $\rank_+(\bM^{\otimes  n}) = \big(\rank_+(\bM)\big)^{n}$; see \citet{Vandaele2016} for  a discussion and related conjectures. The second statement ($\alpha=1$)  comes from   the multi-letter characterization of the exact common information given in \citet{KLE2014}  (Proposition~\ref{prop:KLE}) while the last  ($\alpha=\infty$)  requires some single-letterization steps; see \citet{YuTan2020_exact}.  Corollary~\ref{cor:ECI_nnr}, illustrated in Fig.~\ref{fig:bridge_exact}, implies that the  exact R\'enyi common information of $\alpha$ interpolates between the nonnegative rank (when $\alpha=0$),   the exact common information (when $\alpha=1$), and   $\min_{Q_X, Q_Y} D_\infty(Q_X Q_Y\|\pi_{XY}) $ (when $\alpha=\infty$). This is somewhat analogous to the fact that  the R\'enyi common information forms a bridge between Wyner's common information and the exact common information (see Fig.~\ref{fig:bridge}).

It is important to note a key distinction between Corollary~\ref{cor:ECI_nnr} and Theorem~\ref{thm:nn_perturb}. The former tells us  that the asymptotic exponent of the nonnegative rank of $\pi_{XY}^{\otimes n}$ can be interpreted as the exact R\'enyi common information of order $0$. The latter, on the other hand, tells us that the asymptotic exponent of the nonnegative rank of {\em an $\ell_1$ perturbed version}   of $\pi_{XY}^{\otimes n}$  is Wyner's common information; see~\eqref{eqn:l1_perturbed_nnr}.

\begin{figure}[!ht]
\centering
\begin{overpic}[width=.9\textwidth]{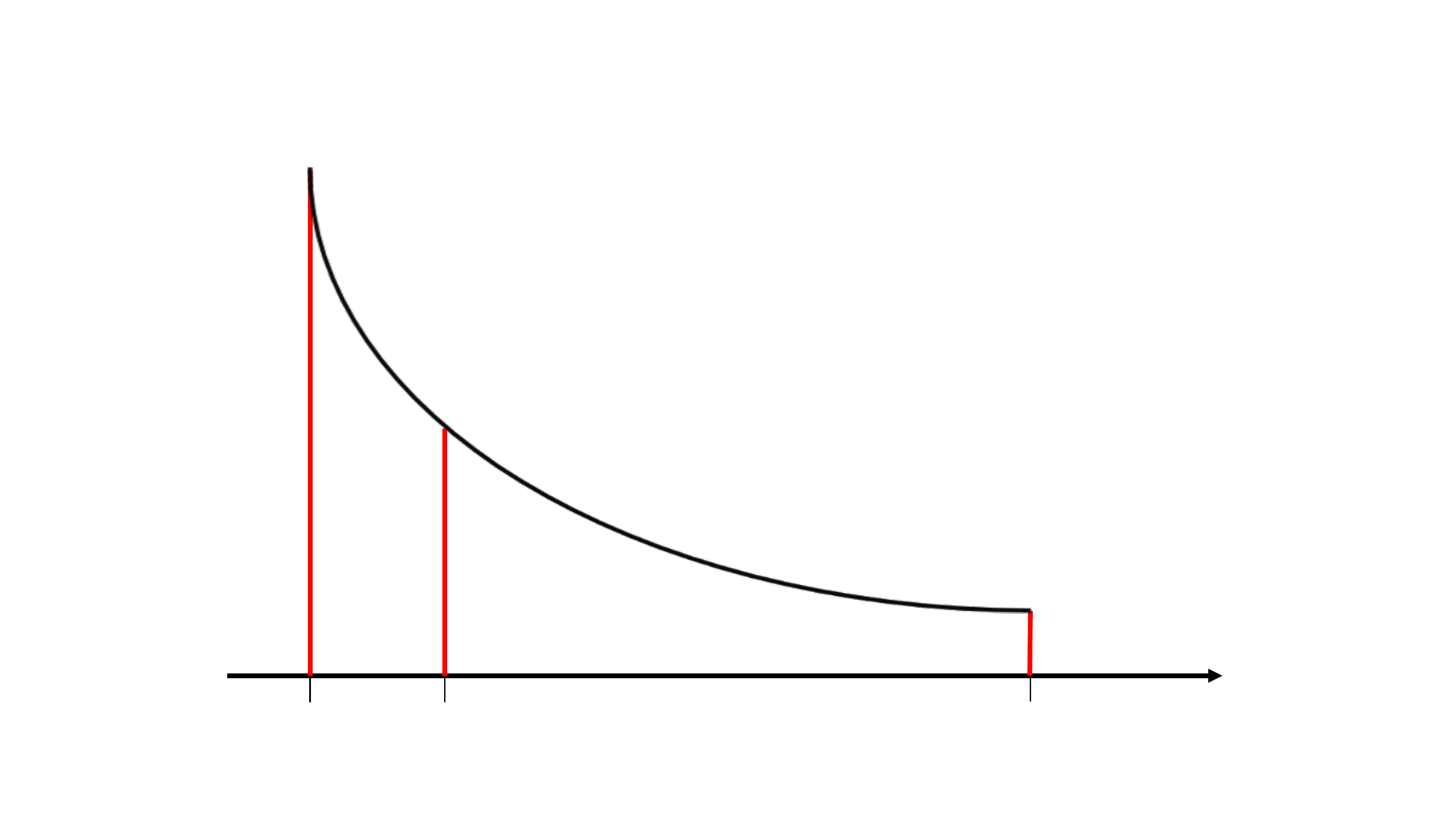}
{\small\put(80,11.5){R\'enyi Order $\alpha$}}
\put(70,5){$\infty$}
\put(45,7){$\ldots$}
\put(50,7){$\ldots$}
\put(40,7){$\ldots$}
\put(55,7){$\ldots$}
\put(30,5){$1$}
\put(21,5){$0$}
{\small\put(64,19){\mbox{$\displaystyle\min_{Q_X, Q_Y}\! D_\infty (Q_XQ_Y\|\pi_{XY})$}}}
{\small\put(25,39.5){Exact CI}}
\put(33.5,38){\vector(-1,-4){2.7}}
\put(21.5,44.5){\circle*{1.5}}
\put(30.5,27){\circle*{1.5}}
\put(54,29.5){\vector(0,-1){13.5}}
{\small\put(43,31){Exact R\'enyi CI $T_{\Ex}^{ (\alpha) }(\pi_{XY})$ }}
\put(71,14.5){\circle*{1.5}}
{\small\put(2,49){Exponent of}
\put(2,45){Nonnegative}
\put(7.5,41){Rank}}
\end{overpic}
\caption{A schematic showing that the exact  R\'enyi common information forms a bridge between the nonnegative rank, the exact common information, and $\min_{Q_X, Q_Y}D_\infty (Q_XQ_Y\|\pi_{XY})$}
\label{fig:bridge_exact}
\end{figure}

%
\section{Nonnegative $\alpha$-Rank}
\label{sec:nnar}
We conclude this section by briefly mentioning a common information-theoretic generalization of the nonnegative rank. Recall from Proposition~\ref{prop:ECI_nnr} that the logarithm of the nonnegative rank is the exact R\'enyi common information of order $0$. Inspired by this relationship, we can generalize the notion of the nonnegative rank to
the nonnegative $\alpha$-rank as follows. For a nonnegative matrix
(but non-zero matrix) $\bM$ and $\alpha\in [-\infty,\infty]$, we define the
{\em nonnegative $\alpha$-rank} of  $\bM$ as
\begin{equation}
\rank_+^{(\alpha)} (\bM) := 2^{G_\alpha(\pi_{XY}) }, \label{eqn:rank_alpha0}
\end{equation}
where $\pi_{XY}$ is the induced distribution of $\bM$, defined in \eqref{eqn:induced_dist}. The normalization by $\|\bM\|_1$ is required as any reasonable notion of rank should be invariant to the scale of the matrix. This definition reduces to that of $\rank_+(\bM)$ when $\alpha=0$ by Proposition~\ref{prop:ECI_nnr}.

For a diagonal matrix $\bD$, let $\|\bD\|_\alpha$ be the $\alpha$-norm of its diagonal, i.e., $\|\bD\|_\alpha  = \big(\sum_i D_{i,i}^\alpha\big)^{1/\alpha}$. By appealing to the definition of the R\'enyi entropy, we see that the nonnegative $\alpha$-rank of $\bM\in\bbR_+^{m\times k}$ can be equivalently expressed as
\begin{equation}
\rank_+^{(\alpha)} (\bM) =\min_{\bU, \bD, \bV}\big\|\bD\big\|_\alpha^{\frac{\alpha}{1-\alpha}}, \label{eqn:rank_alpha1}
\end{equation}
where the minimization runs over all triples of matrices $\bU \in \bbR_+^{m\times r}$,  $\bV \in \bbR_+^{k\times r}$ and diagonal $\bD\in\bbR_+^{r\times r }$ for some $r\in\bbN$ such that 
\begin{align}
\sum_{x=1}^m [\bU]_{xw} &= 1\quad\mbox{for all} \;\,  w \in [r], \label{eqn:cond_prob_U} \\
\sum_{y=1}^k [\bV]_{yw} &= 1\quad\mbox{for all} \;\,  w \in [r], \quad\mbox{and}\label{eqn:cond_prob_V}\\
  \bU\bD\bV^\top&= \frac{ \bM }{\|\bM\|_1} .
\end{align}
The equality conditions in~\eqref{eqn:cond_prob_U} and~\eqref{eqn:cond_prob_V} are to ensure that each column of $\bU$ and each column of $\bV$ is a PMF. The alternative definition of $\rank_+^{(\alpha)}(\bM)$ in \eqref{eqn:rank_alpha1} 
  can be seen to be a generalization of the usual nonnegative rank as defined in Definition~\ref{def:nn_rank}. Indeed, 
\begin{equation}
 \lim_{\alpha\downarrow 0} \rank_+^{(\alpha)} (\bM)=\rank_+ (\bM).
\end{equation}
The properties of $\rank_+^{(\alpha)} (\bM)$ as defined in \eqref{eqn:rank_alpha0} or \eqref{eqn:rank_alpha1} are not well understood and constitute a fertile avenue for further investigations.

\part[Extensions of G\'acs--K\"orner--Witsenhausen's Common \\ Information]{Extensions of G\'acs--K\"orner--Witsenhausen's Common Information} \label{part:three}
\global\long\def\cvx{\bbL}%
 
\global\long\def\cve{\bbU}%

\chapter{Non-Interactive Correlation Distillation} \label{ch:NICD}

In this section, we consider an extension of GKW's common information,
termed {\em Non-Interactive Correlation Distillation}. We recall
that GKW's common information measures the amount of ``almost identical''
randomnesses that can be extracted individually from a pair of correlated
sources. By G\'acs and K\"orner's theorem~\cite{gacs1973common} (also recall Proposition~\ref{prop:CGWW0}), the
GKW's common information of a joint source $(X,Y)$ is positive if and only if there
exists a pair of non-constant functions $(f,g)$ such that $f(X)=g(Y)$
almost surely. Unfortunately, GKW's common information is zero for
many common pairs of sources, such as jointly Gaussian sources
and doubly symmetric binary sources (DSBS) with  correlation
coefficients $\rho\in(-1,1)$. For these joint sources, even if we wish to extract
{\em a single pair} of identical bits from these sources individually,
this innocuous task still turns out to be infeasible.

This observation begs the following natural question: {\em How can
we refine the quantification of common information for these and other
sources such that it resembles the GKW's common information and yet
is non-zero?} Even though any randomnesses extracted from these sources
individually cannot agree almost surely, the extracted randomnesses
can indeed agree with a certain probability, which, in this section,
we quantify via various probability limit theorems such as the central
limit and large deviations theorems. In other
words, the extracted randomnesses can be correlated. It is thus natural to quantify the ``common information''
by the maximal correlation of a pair of random bits that can be extracted
from the sources individually. In the literature, determining this
maximal correlation is coined the  {\em  Noise Stability Problem} (two-set version), the {\em Non-Interactive Correlation
Distillation} or NICD problem. Other names include the {\em Non-Interactive
Binary Simulation Problem} and the {\em Binary Decision Problem}.
This problem was studied by \citet{kamath2016non}, \citet{yang2007possibility},
\citet{mossel2006non} and \citet{witsenhausen1975sequences} among
others.

In this section, we focus mainly on the doubly symmetric binary source
(DSBS) parametrized by its correlation coefficient $\rho\in(-1,1)$.
Even though this source is simple, the NICD problem for this source is nontrivial and insights can be drawn from it. In Section~\ref{sec:nicd2}, we define the $2$-user
NICD problem for the DSBS. Based on the means of the extracted random bits, we define several asymptotic regimes of interest,
including the central limit, moderate, and large deviations regimes.
In Section~\ref{sec:nicd_ach}, we discuss various achievability
schemes for the NICD problem based on certain geometric structures
in Hamming space; these include subcubes and Hamming spheres. These
geometric structures are useful to prove existence results in the
above-mentioned asymptotic regimes. In Sections~\ref{sec:nicd_conv}, \ref{sec:sse}, and~\ref{sec:ldr} we discuss the optimality
of these schemes. 
Finally, in Section~\ref{sec:nicd_arb}
we discuss known results in the NICD problem for other sources such
as 
 bivariate Gaussians. 

\section{Non-Interactive Correlation Distillation with $2$ Users}

\label{sec:nicd2} Consider a doubly symmetric binary distribution
$\pi_{XY}$ on the alphabet $\calX\times\calY=\{0,1\}^{2}$ with   
correlation coefficient $\rho\in(0,1)$, i.e., 
\begin{align}
\pi_{XY}(x,y)=\left\{ \begin{array}{cc}
{\displaystyle \frac{1+\rho}{4}} & x=y\vspace{0.03in}\\
{\displaystyle \frac{1-\rho}{4}} & x\ne y
\end{array}\right..%
\label{eq:NICDDSBS}
\end{align}
With this parametrization, the {\em correlation coefficient} of
$(X,Y)$, defined in \eqref{eqn:pcc}, 
is indeed~$\rho$. The pair of random variables $(X,Y)\sim \pi_{XY}$
corresponds to the DSBS as described in Section~\ref{sec:dsbs} with
{\em crossover probability} $p=(1-\rho)/2\in(0,1/2)$. In this
section, we find it convenient to parametrize the DSBS by its correlation
coefficient~$\rho$ instead of its crossover probability~$p$. It
suffices to consider positive $\rho$ as the results carry over to
the case for negative $\rho$ by replacing $X$ with $1-X$. Throughout
this section except for Section~\ref{sec:nicd_arb}, we  let $(X^{n},Y^{n})$ be distributed as the $n$-fold
product distribution~$\pi_{XY}^{n}$.

\begin{figure}
\centering \tikzset{every picture/.style={line width=0.75pt}}
\resizebox{0.95\textwidth}{!}{      
\begin{tikzpicture}[x=0.75pt,y=0.75pt,yscale=-1,xscale=1] 
\draw   (958.37,566) .. controls (958.37,559.93) and (963.29,555.01) .. (969.37,555.01) .. controls (975.44,555.01) and (980.36,559.93) .. (980.36,566) .. controls (980.36,572.08) and (975.44,577) .. (969.37,577) .. controls (963.29,577) and (958.37,572.08) .. (958.37,566) -- cycle ; 
\draw   (851.37,567) .. controls (851.37,560.93) and (856.29,556.01) .. (862.37,556.01) .. controls (868.44,556.01) and (873.36,560.93) .. (873.36,567) .. controls (873.36,573.08) and (868.44,578) .. (862.37,578) .. controls (856.29,578) and (851.37,573.08) .. (851.37,567) -- cycle ; 
\draw    (851.37,567) -- (796.36,567) ;
\draw [shift={(794.36,567)}, rotate = 360] [color={rgb, 255:red, 255; green, 255; blue, 255 }  ][line width=0.75]    (10.93,-3.29) .. controls (6.95,-1.4) and (3.31,-0.3) .. (0,0) .. controls (3.31,0.3) and (6.95,1.4) .. (10.93,3.29)   ; 
\draw    (980.36,566) -- (1034.36,566) ;
\draw [shift={(1036.36,566)}, rotate = 180] [color={rgb, 255:red, 255; green, 255; blue, 255 }  ][line width=0.75]    (10.93,-3.29) .. controls (6.95,-1.4) and (3.31,-0.3) .. (0,0) .. controls (3.31,0.3) and (6.95,1.4) .. (10.93,3.29)   ; 
\draw    (881.36,568) .. controls (921.36,538) and (910.36,595) .. (950.36,565) ;
\draw (852.36,531) node [anchor=north west][inner sep=0.75pt]    {$X^n$}; 
\draw (964.36,531) node [anchor=north west][inner sep=0.75pt]    {$Y^n$}; 
\draw (682.36,556) node [anchor=north west][inner sep=0.75pt]    {$f(X^n) \!\sim\!\mathrm{Bern}(a)$}; 
\draw (1040.36,554) node [anchor=north west][inner sep=0.75pt]    {$g(Y^n)\! \sim\! \mathrm{Bern}( b)$}; 
\draw (820.36,615) node [anchor=north west][inner sep=0.75pt]    {$\max/\min \, \Pr( f(X^n) =g(Y^n))$}; 
\draw (814.36,508) node [anchor=north west][inner sep=0.75pt]   [align=left] {DSBS with correlation coefficient $\rho $}; 
\draw (816,548) node [anchor=north west][inner sep=0.75pt]    {$f$}; 
\draw (1000,550) node [anchor=north west][inner sep=0.75pt]    {$g$};
\end{tikzpicture}} 
\caption{\label{fig:NICD-with-2}The Non-Interactive Correlation Distillation
problem with $2$ users }
\end{figure}


We now introduce the NICD problem with $2$ users. This problem is
illustrated in Fig.~\ref{fig:NICD-with-2}, in which a source sequence $(X^n,Y^n)$ generated by a DSBS is given, and two random bits $f(X^{n})$ and $g(Y^{n})$ are generated
in a distributed manner using a pair of Boolean functions  $f,g:\{0,1\}^{n}\to\{0,1\}$.
The objective of the NICD problem is to maximize or minimize  the
{\em agreement probability} of~$f(X^{n})$ and~$g(Y^{n})$, i.e.,
$\Pr(f(X^{n})=g(Y^{n}))$, under the condition that the means of $f(Y^{n})$
and $g(Y^{n})$ are bounded. 
\begin{definition}\label{def:for_rev_jp}
Given $a,b\in[0,1]$,  the {\em forward joint
probability} is 
\begin{align}
\hspace{-.2in}\overline{\Gamma}^{(n)}(a,b)\!:=\!\max_{\substack{f,g:\{0,1\}^{n}\to\{0,1\}:\Pr(f(X^{n})=1)\leq a,\\
\Pr(g(Y^{n})=1)\leq b
}
}\!\Pr\big(f(X^{n})\!=\!g(Y^{n})\!=\!1\big).\label{eqn:jp_forward}
\end{align}
Similarly, define the {\em reverse joint probability} as 
\begin{align}
\hspace{-.2in}\underline{\Gamma}^{(n)}(a,b)\!:=\!\min_{\substack{f,g:\{0,1\}^{n}\to\{0,1\}:\Pr(f(X^{n})=1)\ge a,\\
\Pr(g(Y^{n})=1)\ge b
}
}\!\Pr\big(f(X^{n})\!=\!g(Y^{n})\!=\!1\big).\label{eqn:jp_reverse}
\end{align}
\end{definition} In Definition~\ref{def:for_rev_jp}, we maximize
or minimize the probability that both generated bits are equal to
one, i.e.,  $\Pr(f(X^{n})=g(Y^{n})=1)$, rather than $\Pr(f(X^{n})=g(Y^{n}))$,
since by noting that the marginal probabilities $\Pr(f(X^{n})=1)$
and $\Pr(g(Y^{n})=1)$ are constrained in~\eqref{eqn:jp_forward}
and~\eqref{eqn:jp_reverse}, determining the former is equivalent
to that of the latter.

\subsection{Optimizing over Supports of Boolean Functions}

Instead of optimizing over the Boolean functions $f$ and $g$, in
the following, we find it convenient for the sake of exploiting the
properties of geometric structures (such as Hamming balls and spheres)
to optimize over their {\em supports}. The {\em support} of
a Boolean function $f:\{0,1\}^{n}\to\{0,1\}$ is defined as the set
$\calA:=\{x^{n}\in\{0,1\}^{n}:f(x^{n})=1\}$.

If we denote the supports of $f$ and $g$ as $\calA$ and $\calB$
respectively, then one can rewrite \eqref{eqn:jp_forward} and \eqref{eqn:jp_reverse}
respectively as 
\begin{align}
\overline{\Gamma}^{(n)}(a,b) & =\underset{\calA,\calB\subset\{0,1\}^{n}\, :\, \pi_{X}^{n}(\calA)\le a,\, \pi_{Y}^{n}(\calB)\le b}{\max}\pi_{XY}^{n}(\calA\times\calB),\label{eqn:jp_for}
\end{align}
and 
\begin{align}
\underline{\Gamma}^{(n)}(a,b)=\underset{\calA,\calB\subset\{0,1\}^{n}\, :\, \pi_{X}^{n}(\calA)\geq a,\, \pi_{Y}^{n}(\calB)\geq b}{\min}\pi_{XY}^{n}(\calA\times\calB).\label{eqn:jp_rev}
\end{align}
Let $\overline{\Gamma}^{(\infty)}$ and $\underline{\Gamma}^{(\infty)}$  respectively
denote the pointwise limits of $\overline{\Gamma}^{(n)}$ and $\underline{\Gamma}^{(n)}$
as $n\to\infty$, i.e., 
\begin{equation}
\overline{\Gamma}^{(\infty)}(a,b)\!:=\!\lim_{n\to\infty}\overline{\Gamma}^{(n)}(a,b)\quad\mbox{and}\quad\underline{\Gamma}^{(\infty)}(a,b):=\lim_{n\to\infty}\underline{\Gamma}^{(n)}(a,b).\label{eqn:Gamma_infty}
\end{equation}
These are respectively known as the {\em asymptotic forward} and
{\em asymptotic reverse joint probabilities}.

By definition,   the forward and reverse joint probabilities are
non-decreasing in each of the parameters when the other is fixed.
This implies that there exists an optimal pair of sets $\calA,\calB\subset\{0,1\}^{n}$
(or Boolean functions $(f,g)$) attaining the forward joint probability
such that 
\begin{equation}
\pi_{X}^{n}(\calA)=\frac{\lfloor a\cdot2^{n}\rfloor}{2^{n}}\quad\mbox{and}\quad \pi_{Y}^{n}(\calB)=\frac{\lfloor b\cdot2^{n}\rfloor}{2^{n}}.
\end{equation}
Indeed, if either of these statements were not true, we can enlarge $\calA$ (resp.\ $\calB$)
to make its $\pi_{X}^{n}$-probability (resp.\ $\pi_{Y}^{n}$-probability)
closer to $a$ (resp.\ $b$). 
Similarly, there exists an optimal pair $(\calA,\calB)$ (or Boolean
functions $(f,g)$) attaining the reverse joint probability such that
\begin{equation}
\pi_{X}^{n}(\calA)=\frac{\lceil a\cdot2^{n}\rceil}{2^{n}}\quad\mbox{and}\quad \pi_{Y}^{n}(\calB)=\frac{\lceil b\cdot2^{n}\rceil}{2^{n}}.
\end{equation}
As a consequence, for dyadic rationals $a$ and $b$ (i.e., $a={M}/{2^{n}},b={N}/{2^{n}}$
with integers $M,N\in\{0,1,\ldots,2^{n}\}$), the inequalities in
the constraints in the definitions of forward and reverse probabilities
(i.e., $\overline{\Gamma}^{(n)}(a,b)$ and $\underline{\Gamma}^{(n)}(a,b)$)
can be replaced by equalities, without affecting their values. 
These observations also allow  us to conclude that 
\begin{align}
\overline{\Gamma}^{(n)}(1-a,b)=b-\underline{\Gamma}^{(n)}(a,b)\quad\mbox{for all dyadic rationals}\;  a,b .
\end{align}
When we consider the asymptotic case in which $n\to\infty$, i.e.,
the quantities in~\eqref{eqn:Gamma_infty}, 
the requirement that $a$ and $b$ are dyadic rationals can be removed.
This implies that for any $a,b\in[0,1]$, 
\begin{align}
\overline{\Gamma}^{(\infty)}(1-a,b)=b-\underline{\Gamma}^{(\infty)}(a,b).\label{eqn:equiv_for_rev}
\end{align}
Hence, for all $(a,b)\in[0,1]^{2}$, determining the asymptotic forward
joint probability in~\eqref{eqn:Gamma_infty} is equivalent to determining
the asymptotic reverse joint probability and vice versa. 

\subsection{Asymptotic Regimes and Exponents of Interest}

The identification of the optimal pairs $(\calA,\calB)$ that attain
the forward or reverse joint probabilities in \eqref{eqn:jp_for}
and \eqref{eqn:jp_rev} constitutes a combinatorial problem and is
thus difficult in general. Hence, we focus on the limiting cases as
$n\to\infty$ as this simplifies the problem, and the resultant problems
are also information-theoretic in nature. Specifically, the following
three asymptotic regimes will be considered. 
\begin{enumerate}
\item \underline{Central limit} (CL) regime: We set $a$ and $b$ to be
constants. We write $a=2^{-\alpha}$ and $b=2^{-\beta}$ for a pair
of constants $(\alpha,\beta)\in[0,\infty)^{2}$. 
\item \underline{Large deviations} (LD) regime: We set $a$ and $b$ to
be sequences that vanish exponentially fast as $n\to\infty$. In particular,
we write $a=2^{-n\alpha}$ and $b=2^{-n\beta}$ for a pair of constants
$(\alpha,\beta)\in[0,1]^{2}$. 
\item \label{item:MDregime} \underline{Moderate deviations} (MD) regime: We set $a$ and $b$
to be sequences that vanish subexponentially fast as $n\to\infty$.
More precisely, $a=2^{-\theta_{n}\alpha},b=2^{-\theta_{n}\beta}$
for a pair of constants $(\alpha,\beta)\in[0,\infty)^{2}$, where
$\{\theta_{n}\}_{n\in\bbN}$ is a positive sequence satisfying $\theta_{n}\to\infty$
and ${\theta_{n}}/{n}\to0$, henceforth called an {\em MD sequence}. 
\end{enumerate}

The MD regime straddles between the CL and LD regimes. It is usually
the case if one solves a certain information-theoretic problem in
the CL or the LD regimes, a result for the MD regime can be derived
as a corollary, for example, by appealing to Taylor's theorem; see~\citet{altug10, PV10, Tan12} for example. We
will see that this is also the case for the NICD problem.

In the following section, we will set $\calA$ and $\calB$ to be
subcubes, Hamming balls, and Hamming spheres. These are prototypical
subsets in the Hamming space that are amenable to analyses. We will then
apply various probabilistic limit theorems---such as the central
limit theorem and large and moderate deviations theorems---to derive
the ``performances'' of these subsets in attaining the forward and
reverse joint probabilities. 
We formally define several exponents of interest. 
\begin{definition}\label{def:for_rev_exp}
Consider the following exponents: 
\begin{enumerate}
\item \underline{Forward and reverse CL exponents}: For $\alpha,\beta\in[0,\infty)$,
\begin{align}
\underline{\Upsilon}_{\mathrm{CL}}^{(n)}(\alpha,\beta) & :=-\log\overline{\Gamma}^{(n)}(2^{-\alpha},2^{-\beta})\quad\mbox{and}\label{eqn:CL_for}\\
\overline{\Upsilon}_{\mathrm{CL}}^{(n)}(\alpha,\beta) & :=-\log\underline{\Gamma}^{(n)}(2^{-\alpha},2^{-\beta}).\label{eqn:CL_rev}
\end{align}
\item \underline{Forward and reverse LD exponents}: For $\alpha,\beta\in[0,1]$,
\begin{align}
\underline{\Upsilon}_{\mathrm{LD}}^{(n)}(\alpha,\beta) & :=-\frac{1}{n}\log\overline{\Gamma}^{(n)}(2^{-n\alpha},2^{-n\beta})\quad\mbox{and}\label{eqn:LD_for}\\
\overline{\Upsilon}_{\mathrm{LD}}^{(n)}(\alpha,\beta) & :=-\frac{1}{n}\log\underline{\Gamma}^{(n)}(2^{-n\alpha},2^{-n\beta}).\label{eqn:LD_rev}
\end{align}
\item \underline{Forward and reverse  MD exponents}: 
Given an MD sequence  $\{\theta_{n}\}$, and for $\alpha,\beta\in[0,\infty)$, 
\begin{align}
\underline{\Upsilon}_{\mathrm{MD}}^{(n)}(\alpha,\beta) & :=-\frac{1}{\theta_{n}}\log\overline{\Gamma}^{(n)}(2^{-\theta_{n}\alpha},2^{-\theta_{n}\beta})\quad\mbox{and}\label{eqn:MD_for}\\
\overline{\Upsilon}_{\mathrm{MD}}^{(n)}(\alpha,\beta) & :=-\frac{1}{\theta_{n}}\log\underline{\Gamma}^{(n)}(2^{-\theta_{n}\alpha},2^{-\theta_{n}\beta}).\label{eqn:MD_rev}
\end{align}
\item Define $\underline{\Upsilon}_{\mathrm{CL}}^{(\infty)}$, $\overline{\Upsilon}_{\mathrm{CL}}^{(\infty)}$,
$\underline{\Upsilon}_{\mathrm{LD}}^{(\infty)}$, $\overline{\Upsilon}_{\mathrm{LD}}^{(\infty)}$,
$\underline{\Upsilon}_{\mathrm{MD}}^{(\infty)}$, and $\overline{\Upsilon}_{\mathrm{MD}}^{(\infty)}$
as the pointwise limits of the above exponents as $n\to\infty$. 
\end{enumerate}
\end{definition} The reader may notice that the  definitions
in \eqref{eqn:CL_for}--\eqref{eqn:MD_rev} appear to be redundant,
since each of the forward (resp.\ reverse) exponents is equivalent
to the forward (resp.\ reverse) joint probability in the sense that
if the forward (resp.\ reverse) joint probability has been determined,
then each of the forward (resp.\ reverse) exponents has also been
determined. 
This also means the forward (resp.\ reverse) exponents are also ``equivalent''.
For example, for each~$n \in\bbN$, 
 $\underline{\Upsilon}_{\mathrm{LD}}^{(n)}(\alpha,\beta)=\frac{1}{n}\underline{\Upsilon}_{\mathrm{CL}}^{(n)}(n\alpha,n\beta)$ and $\underline{\Upsilon}_{\mathrm{MD}}^{(n)}(\alpha,\beta)=\frac{1}{\theta_{n}}\underline{\Upsilon}_{\mathrm{CL}}^{(n)}(\theta_{n}\alpha,\theta_{n}\beta)$.
We introduce these notations because in the sequel, we will introduce
several {\em dimension-free} bounds (e.g., Theorem~\ref{thm:sse}) that can be conveniently expressed in
terms of the exponents defined in~\eqref{eqn:CL_for}--\eqref{eqn:MD_rev}. Here, 
a dimension-free bound is one that is independent of the dimension (or blocklength) $n$, but is valid for all dimensions $n$. 

In the following, we introduce bounds on the NICD exponents in~\eqref{eqn:CL_for}--\eqref{eqn:MD_rev}.
As is conventional in information theory, there are two parts to this
endeavor. In the achievability part that will be discussed in Section~\ref{sec:nicd_ach},
we construct subsets $\calA$ and $\calB$ that upper bound the forward exponents and lower bound the reverse exponents.
In the converse parts that will be discussed in Section~\ref{sec:nicd_conv}--\ref{sec:ldr},
we demonstrate impossibility results, i.e., lower bounds on the forward exponents and upper bounds on the reverse exponents.
The achievability  and converse bounds match  in some special cases. 


\section{Achievability: Subcubes, Hamming Balls, and Spheres}
\label{sec:nicd_ach}

We now consider the achievability parts, i.e., deriving lower bounds
for the forward joint probability and upper bounds for the reverse
joint probability. For these parts, we consider three canonical types
of subsets in Hamming space---subcubes, Hamming balls, and Hamming
spheres.

\subsection{Subcubes}
\label{sec:subcubes}

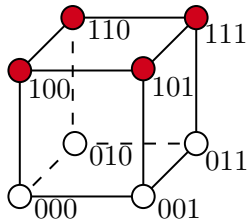
\begin{figure}
\centering \tikzset{every picture/.style={line width=0.75pt}}
\begin{tikzpicture}[x=0.75pt,y=0.75pt,yscale=-1,xscale=1,scale=0.50]  
\draw   (611.03,1556.01) -- (664.42,1502.62) -- (789,1502.62) -- (789,1629.78) -- (735.61,1683.17) -- (611.03,1683.17) -- cycle ;  \draw   (789,1502.62) -- (735.61,1556.01) -- (611.03,1556.01) ;  \draw   (735.61,1556.01) -- (735.61,1683.17) ; 
\draw  [dash pattern={on 4.5pt off 4.5pt}]  (664.42,1502.62) -- (664.42,1632.01) ; 
\draw  [dash pattern={on 4.5pt off 4.5pt}]  (666.65,1629.78) -- (611.03,1685.41) ; 
\draw  [dash pattern={on 4.5pt off 4.5pt}]  (664.42,1629.78) -- (789,1629.78) ; 
\draw  [fill={rgb, 255:red, 208; green, 2; blue, 27 }  ,fill opacity=1 ] (724.08,1554.22) .. controls (724.08,1547.74) and (729.21,1542.5) .. (735.54,1542.5) .. controls (741.87,1542.5) and (747,1547.74) .. (747,1554.22) .. controls (747,1560.69) and (741.87,1565.93) .. (735.54,1565.93) .. controls (729.21,1565.93) and (724.08,1560.69) .. (724.08,1554.22) -- cycle ; 
\draw  [fill={rgb, 255:red, 255; green, 255; blue, 255 }  ,fill opacity=1 ] (655.19,1629.78) .. controls (655.19,1623.31) and (660.32,1618.06) .. (666.65,1618.06) .. controls (672.98,1618.06) and (678.11,1623.31) .. (678.11,1629.78) .. controls (678.11,1636.25) and (672.98,1641.5) .. (666.65,1641.5) .. controls (660.32,1641.5) and (655.19,1636.25) .. (655.19,1629.78) -- cycle ; 
\draw  [fill={rgb, 255:red, 208; green, 2; blue, 27 }  ,fill opacity=1 ] (599.57,1556.01) .. controls (599.57,1549.54) and (604.7,1544.29) .. (611.03,1544.29) .. controls (617.35,1544.29) and (622.48,1549.54) .. (622.48,1556.01) .. controls (622.48,1562.48) and (617.35,1567.73) .. (611.03,1567.73) .. controls (604.7,1567.73) and (599.57,1562.48) .. (599.57,1556.01) -- cycle ; 
\draw  [fill={rgb, 255:red, 208; green, 2; blue, 27 }  ,fill opacity=1 ] (652.96,1502.62) .. controls (652.96,1496.14) and (658.09,1490.9) .. (664.42,1490.9) .. controls (670.75,1490.9) and (675.88,1496.14) .. (675.88,1502.62) .. controls (675.88,1509.09) and (670.75,1514.33) .. (664.42,1514.33) .. controls (658.09,1514.33) and (652.96,1509.09) .. (652.96,1502.62) -- cycle ; 
\draw  [fill={rgb, 255:red, 255; green, 255; blue, 255 }  ,fill opacity=1 ] (599.57,1683.17) .. controls (599.57,1676.7) and (604.7,1671.45) .. (611.03,1671.45) .. controls (617.35,1671.45) and (622.48,1676.7) .. (622.48,1683.17) .. controls (622.48,1689.64) and (617.35,1694.89) .. (611.03,1694.89) .. controls (604.7,1694.89) and (599.57,1689.64) .. (599.57,1683.17) -- cycle ; 
\draw  [fill={rgb, 255:red, 255; green, 255; blue, 255 }  ,fill opacity=1 ] (724.15,1683.17) .. controls (724.15,1676.7) and (729.28,1671.45) .. (735.61,1671.45) .. controls (741.94,1671.45) and (747.07,1676.7) .. (747.07,1683.17) .. controls (747.07,1689.64) and (741.94,1694.89) .. (735.61,1694.89) .. controls (729.28,1694.89) and (724.15,1689.64) .. (724.15,1683.17) -- cycle ; 
\draw  [fill={rgb, 255:red, 255; green, 255; blue, 255 }  ,fill opacity=1 ] (777.54,1629.78) .. controls (777.54,1623.31) and (782.67,1618.06) .. (789,1618.06) .. controls (795.33,1618.06) and (800.46,1623.31) .. (800.46,1629.78) .. controls (800.46,1636.25) and (795.33,1641.5) .. (789,1641.5) .. controls (782.67,1641.5) and (777.54,1636.25) .. (777.54,1629.78) -- cycle ; 
\draw  [fill={rgb, 255:red, 208; green, 2; blue, 27 }  ,fill opacity=1 ] (777.54,1502.62) .. controls (777.54,1496.14) and (782.67,1490.9) .. (789,1490.9) .. controls (795.33,1490.9) and (800.46,1496.14) .. (800.46,1502.62) .. controls (800.46,1509.09) and (795.33,1514.33) .. (789,1514.33) .. controls (782.67,1514.33) and (777.54,1509.09) .. (777.54,1502.62) -- cycle ;
\draw (622.48,1683.17) node [anchor=north west][inner sep=0.75pt]  [font=\normalsize]  {$000$}; 
\draw (615.97,1562.19) node [anchor=north west][inner sep=0.75pt]  [font=\normalsize]  {$100$}; 
\draw (795,1635.5) node [anchor=north west][inner sep=0.75pt]  [font=\normalsize]  {$011$}; 
\draw (678.11,1629.78) node [anchor=north west][inner sep=0.75pt]  [font=\normalsize]  {$010$}; 
\draw (747.07,1683.17) node [anchor=north west][inner sep=0.75pt]  [font=\normalsize]  {$001$}; 
\draw (794,1506.5) node [anchor=north west][inner sep=0.75pt]  [font=\normalsize]  {$111$}; 
\draw (675.88,1502.62) node [anchor=north west][inner sep=0.75pt]  [font=\normalsize]  {$110$}; 
\draw (741,1560.22) node [anchor=north west][inner sep=0.75pt]  [font=\normalsize]  {$101$};
\end{tikzpicture}

\caption{\label{fig:subcube}A subcube (shaded) in $\{0,1\}^{3}$ with the
first component fixed to $1$}
\end{figure}

An {\em $(n-k)$-subcube} $\bbC_{n-k}$ is a set of vectors
$x^{n}\in\{0,1\}^{n}$ with $k$ components held fixed. For example,
if we fix the first $k$ components to $1$, then we get the $(n-k)$-subcube
$\{1^{k}\}\times\{0,1\}^{n-k}$, where $1^{k}$ denotes the length-$k$
all-ones vector. For any set $\calA\subset\{0,1\}^{n}$, we say that
its {\em indicator}, denoted as $\bone_{\calA}$, is the function $f:\{0,1\}^{n}\to\{0,1\}$
such that $f(x^{n})=1$ for all $x^{n}\in\calA$ and $f(x^{n})=0$
for all $x^{n}\notin\calA$.  The indicator of the subcube $\{{1}^{k}\}\times\{0,1\}^{n-k}$
is $x^{n}\in\{0,1\}^{n}\mapsto\prod_{i=1}^{k}x_{i}$. An important
class of subcubes is the class of $(n-1)$-subcubes, e.g., $\{1\}\times\{0,1\}^{n-1}$.
An $(n-1)$-subcube with $n=3$ is illustrated in Fig.~\ref{fig:subcube}.
The indicators
of $( n-1)$-subcubes are the functions $x^n \mapsto  x_i$ or $x^n\mapsto 1-x_i$ for $i \in  [n]$. Such functions are known as
{\em dictator} functions.


We now return to the NICD problem. For $a=b=2^{-k}$ for a  positive integer~$k$,
we choose $\calA$ and $\calB$ as a pair of identical $(n-k)$-subcubes. By referring to the joint distribution in \eqref{eq:NICDDSBS}, we see that  the joint probability induced by  $(\calA, \calB)$  is 
\begin{equation}
\pi_{XY}^{n}(\calA\times\calB)=\pi_{XY}(1,1)^{k}=\Big(\frac{1+\rho}{4}\Big)^{k}.\label{eqn:subcube_id}
\end{equation}
On the other hand, if we choose $\calA$ and $\calB$ as a pair of
anti-symmetric $(n-k)$-subcubes, i.e., $\calA=1^{n}-\calB=\mathcal{C}_{n-k}$,
then the induced joint probability is 
\begin{equation}
\pi_{XY}^{n}(\calA\times\calB)=\pi_{XY}(1,0)^{k}=\Big(\frac{1-\rho}{4}\Big)^{k}.\label{eqn:subcube_anti}
\end{equation}
For the more general case in which $a=2^{-k_{1}}$ and $b=2^{-k_{2}}$
for integers $0\le k_{1}\le k_{2}$, if we choose $(\calA,\calB)$
as a pair of ``nested'' subcubes, i.e., $\calA=\{1^{k_{1}}\}\times\{0,1\}^{n-k_{1}}$
and $\calB=\{1^{k_{2}}\}\times\{0,1\}^{n-k_{2}}$, then the induced
joint probability 
\begin{equation}
\pi_{XY}^{n}(\calA\times\calB)=\Big(\frac{1}{2}\Big)^{k_{2}-k_{1}}\Big(\frac{1+\rho}{4}\Big)^{k_{1}}.\label{eqn:subcube_nested}
\end{equation}
For the same case, if we choose $(\calA,\calB)$ as a pair of ``anti-nested''
subcubes, i.e., $\calA=\{1^{k_{1}}\}\times\{0,1\}^{n-k_{1}}$ and
$\calB=\{0^{k_{2}}\}\times\{0,1\}^{n-k_{2}}$, then 
\begin{equation}
\pi_{XY}^{n}(\calA\times\calB)=\Big(\frac{1}{2}\Big)^{k_{2}-k_{1}}\Big(\frac{1-\rho}{4}\Big)^{k_{1}}.\label{eqn:subcube_anti_nested}
\end{equation}

We now discuss the case in which $a$ and $b$ are dyadic rationals
(i.e., $a={M}/{2^{n}},b={N}/{2^{n}}$ for some integers $M,N$). Observe
that if a dyadic rational $a$ is not equal to $2^{-k}$ for some
integer $k$, then there is no subcube with $\pi_{X}^{n}$-probability
{\em exactly} equal to $a$. Hence, to achieve better performances, a generalization
of subcubes $\{0^{k}\}\times\{0,1\}^{n-k}$ and $\{1^{k}\}\times\{0,1\}^{n-k}$,
called {\em lexicographic sets}, turns out to be useful. A subset
of $\{0,1\}^{n}$ is called {\em lexicographic} if the elements
are selected as the first sequences in some lexicographic order (either
ascending or descending). A Boolean function is called {\em
lexicographic} if its support is a lexicographic set. 
By setting $\calA$ and $\calB$ to be two lexicographic sets both
in ascending (or descending) order, we can obtain a relatively  large
joint probability $\pi_{XY}^{n}(\calA\times\calB)$. On the other hand,
if we set $\calA$ and~$\calB$ to be two lexicographic sets such
that one is chosen in ascending order and the other in descending
order, we can obtain a relatively  small joint probability $\pi_{XY}^{n}(\calA\times\calB)$.
The explicit expressions for these two joint probabilities are complicated,
and thus we omit them. 
A lexicographic set chosen in ascending order can then be written
as $\{x^{n}\in\{0,1\}^{n}:\sum_{i=1}^{n}2^{i-1}x_{i}\le r\}$ for
some~$r$. This is a special case of so-called {\em linear threshold
functions}, which is discussed in detail in \cite{ODonnell14analysisof}.


\subsection{Hamming Balls} \label{sec:ball}

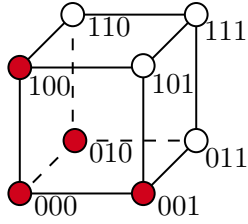
\begin{figure}
\centering \tikzset{every picture/.style={line width=0.75pt}}
\begin{tikzpicture}[x=0.75pt,y=0.75pt,yscale=-1,xscale=1,scale=0.50]    
\draw   (233.03,1897.01) -- (286.42,1843.62) -- (411,1843.62) -- (411,1970.78) -- (357.61,2024.17) -- (233.03,2024.17) -- cycle ;  \draw   (411,1843.62) -- (357.61,1897.01) -- (233.03,1897.01) ;  \draw   (357.61,1897.01) -- (357.61,2024.17) ; 
\draw  [dash pattern={on 4.5pt off 4.5pt}]  (286.42,1843.62) -- (286.42,1973.01) ; 
\draw  [dash pattern={on 4.5pt off 4.5pt}]  (288.65,1970.78) -- (233.03,2026.41) ; 
\draw  [dash pattern={on 4.5pt off 4.5pt}]  (286.42,1970.78) -- (411,1970.78) ; 
\draw  [fill={rgb, 255:red, 255; green, 255; blue, 255 }  ,fill opacity=1 ] (346.08,1895.22) .. controls (346.08,1888.74) and (351.21,1883.5) .. (357.54,1883.5) .. controls (363.87,1883.5) and (369,1888.74) .. (369,1895.22) .. controls (369,1901.69) and (363.87,1906.93) .. (357.54,1906.93) .. controls (351.21,1906.93) and (346.08,1901.69) .. (346.08,1895.22) -- cycle ; 
\draw  [fill={rgb, 255:red, 208; green, 2; blue, 27 }  ,fill opacity=1 ] (277.19,1970.78) .. controls (277.19,1964.31) and (282.32,1959.06) .. (288.65,1959.06) .. controls (294.98,1959.06) and (300.11,1964.31) .. (300.11,1970.78) .. controls (300.11,1977.25) and (294.98,1982.5) .. (288.65,1982.5) .. controls (282.32,1982.5) and (277.19,1977.25) .. (277.19,1970.78) -- cycle ; 
\draw  [fill={rgb, 255:red, 208; green, 2; blue, 27 }  ,fill opacity=1 ] (221.57,1897.01) .. controls (221.57,1890.54) and (226.7,1885.29) .. (233.03,1885.29) .. controls (239.35,1885.29) and (244.48,1890.54) .. (244.48,1897.01) .. controls (244.48,1903.48) and (239.35,1908.73) .. (233.03,1908.73) .. controls (226.7,1908.73) and (221.57,1903.48) .. (221.57,1897.01) -- cycle ; 
\draw  [fill={rgb, 255:red, 255; green, 255; blue, 255 }  ,fill opacity=1 ] (274.96,1843.62) .. controls (274.96,1837.14) and (280.09,1831.9) .. (286.42,1831.9) .. controls (292.75,1831.9) and (297.88,1837.14) .. (297.88,1843.62) .. controls (297.88,1850.09) and (292.75,1855.33) .. (286.42,1855.33) .. controls (280.09,1855.33) and (274.96,1850.09) .. (274.96,1843.62) -- cycle ; 
\draw  [fill={rgb, 255:red, 208; green, 2; blue, 27 }  ,fill opacity=1 ] (221.57,2024.17) .. controls (221.57,2017.7) and (226.7,2012.45) .. (233.03,2012.45) .. controls (239.35,2012.45) and (244.48,2017.7) .. (244.48,2024.17) .. controls (244.48,2030.64) and (239.35,2035.89) .. (233.03,2035.89) .. controls (226.7,2035.89) and (221.57,2030.64) .. (221.57,2024.17) -- cycle ; 
\draw  [fill={rgb, 255:red, 208; green, 2; blue, 27 }  ,fill opacity=1 ] (346.15,2024.17) .. controls (346.15,2017.7) and (351.28,2012.45) .. (357.61,2012.45) .. controls (363.94,2012.45) and (369.07,2017.7) .. (369.07,2024.17) .. controls (369.07,2030.64) and (363.94,2035.89) .. (357.61,2035.89) .. controls (351.28,2035.89) and (346.15,2030.64) .. (346.15,2024.17) -- cycle ; 
\draw  [fill={rgb, 255:red, 255; green, 255; blue, 255 }  ,fill opacity=1 ] (399.54,1970.78) .. controls (399.54,1964.31) and (404.67,1959.06) .. (411,1959.06) .. controls (417.33,1959.06) and (422.46,1964.31) .. (422.46,1970.78) .. controls (422.46,1977.25) and (417.33,1982.5) .. (411,1982.5) .. controls (404.67,1982.5) and (399.54,1977.25) .. (399.54,1970.78) -- cycle ; 
\draw  [fill={rgb, 255:red, 255; green, 255; blue, 255 }  ,fill opacity=1 ] (399.54,1843.62) .. controls (399.54,1837.14) and (404.67,1831.9) .. (411,1831.9) .. controls (417.33,1831.9) and (422.46,1837.14) .. (422.46,1843.62) .. controls (422.46,1850.09) and (417.33,1855.33) .. (411,1855.33) .. controls (404.67,1855.33) and (399.54,1850.09) .. (399.54,1843.62) -- cycle ;
\draw (244.48,2024.17) node [anchor=north west][inner sep=0.75pt]  [font=\normalsize]  {$000$}; 
\draw (237.97,1903.19) node [anchor=north west][inner sep=0.75pt]  [font=\normalsize]  {$100$}; 
\draw (417,1976.5) node [anchor=north west][inner sep=0.75pt]  [font=\normalsize]  {$011$}; 
\draw (300.11,1970.78) node [anchor=north west][inner sep=0.75pt]  [font=\normalsize]  {$010$}; 
\draw (369.07,2024.17) node [anchor=north west][inner sep=0.75pt]  [font=\normalsize]  {$001$}; 
\draw (416,1847.5) node [anchor=north west][inner sep=0.75pt]  [font=\normalsize]  {$111$}; 
\draw (297.88,1843.62) node [anchor=north west][inner sep=0.75pt]  [font=\normalsize]  {$110$}; 
\draw (363,1901.22) node [anchor=north west][inner sep=0.75pt]  [font=\normalsize]  {$101$};
\end{tikzpicture}

\caption{\label{fig:ball} A Hamming ball (shaded) in $\{0,1\}^{3}$ centered
at $(0,0,0)$ with radius $1$ }
\end{figure}

A {\em Hamming ball} centered at $y^{n}\in\{0,1\}^{n}$ with radius
$r\in\{0,1,\ldots,n\}$ takes the form $\mathbb{B}_{r}(y^{n}):=\{x^{n}\in\{0,1\}^{n}:d_{\mathrm{H}}(x^{n},y^{n})\le r\}$,
where $d_{\mathrm{H}}(x^{n},y^{n}):=\sum_{i=1}^{n}\bone\{x_{i}\neq y_{i}\}$
denotes the {\em Hamming distance} between vectors $x^{n}$ and
$y^{n}$. An example of a Hamming ball with radius $1$ is illustrated
in Fig.~\ref{fig:ball}. In the following, we only consider Hamming
balls that are centered at $0^{n}=(0,0,\ldots,0)$ or $1^{n}=(1,1,\ldots,1)$.
For these Hamming balls (with radius $r$), we can rewrite them as
$\{x^{n}\in\{0,1\}^{n}:\sum_{i=1}^{n}x_{i}\le r\}$ and $\{x^{n}\in\{0,1\}^{n}:\sum_{i=1}^{n}x_{i}\ge n-r\}$
respectively.

We now set $\calA$ and $\calB$ in the NICD problem to be Hamming
balls. We first consider the CL regime in which we choose $\calA$
and $\calB$ to be a pair of {\em concentric} Hamming balls. More
specifically, $\calA_{n}:=\mathbb{B}_{r_{n}}(0^{n})$ and $\calB_{n}=\mathbb{B}_{s_{n}}(0^{n})$
for some sequences $\{r_{n}\}_{n\in\bbN}$ and $\{s_{n}\}_{n\in\bbN}$.
We append the subscript $n$ to $\calA$ and $\calB$, to indicate
that these two sets depend on $n$. We can rewrite $\calA_{n}$ as
$\{x^{n}:\sum_{i=1}^{n}x_{i}\le r_{n}\}$. Hence, the marginal probability
$\pi_{X}^{n}(\calA_{n})$ can be written as $\Pr(\sum_{i=1}^{n}X_{i}\le r_{n})$
where $\{X_{i}\}_{i=1}^{n}$ are i.i.d.\ with each $X_{i}\sim\mathrm{Bern}(1/2)$.
To calculate the limiting value of this probability as $n\to\infty$,
one may apply several well-known concentration of measure theorems,
including the central limit theorem or various large deviations theorems.
Since we focus on the CL regime here, we require that $\pi_{X}^{n}(\calA_{n})$
tends to a non-vanishing constant. Hence, we set the radius $r_{n}=\frac{n}{2}+\frac{\lambda\sqrt{n}}{2}$
for some $\lambda\in\mathbb{R}$. Then, the (univariate) central limit
theorem yields 
\begin{align}
\lim_{n\to\infty}\pi_{X}^{n}(\calA_{n})=\Phi(\lambda),\label{eqn:cl1}
\end{align}
where $\Phi(\cdot)$ is the cumulative distribution function (CDF)
of the standard univariate Gaussian distribution. Similarly, if we
set the radius $s_{n}=\frac{n}{2}+\frac{\mu\sqrt{n}}{2}$ for some
$\mu\in\mathbb{R}$, we obtain 
\begin{align}
\lim_{n\to\infty}\pi_{Y}^{n}(\calB_{n})=\Phi(\mu).\label{eqn:cl2}
\end{align}

We now estimate the asymptotic value of the joint probability $\pi_{XY}^{n}(\calA_{n}\times\calB_{n})$
where $\calA_{n}$ and $\calB_{n}$ are concentric spheres with radii
$r_{n}$ and~$s_{n}$ respectively. Note that this probability can
be restated as $\Pr(\sum_{i=1}^{n}X_{i}\le r_{n},\sum_{i=1}^{n}Y_{i}\le s_{n})$
where $(X^n,Y^n)=\{(X_{i},Y_{i})\}_{i=1}^{n}$ is a source sequence generated by a DSBS with correlation coefficient $\rho$. 
The multivariate central limit theorem then yields 
\begin{align}
\lim_{n\to\infty}\pi_{XY}^{n}(\calA_{n}\times\calB_{n})=\Phi_{\rho}(\lambda,\mu),
\label{eqn:biv}
\end{align}
where $\Phi_{\rho}(\cdot,\cdot)$ is the joint CDF of the zero-mean
bivariate Gaussian distribution with covariance matrix 
\begin{equation}
\bK:=\begin{bmatrix}1 & \rho\\
\rho & 1
\end{bmatrix}.\label{eqn:cov_mat_rho}
\end{equation}

Based on the asymptotic results in \eqref{eqn:cl1}--\eqref{eqn:biv}, one can obtain a lower bound on the forward
joint probability in the NICD problem \cite[Ex.~9.24 and 10.5]{ODonnell14analysisof}.
\begin{proposition} \label{prop:CLsphere} For $a,b\in(0,1)$, 
\begin{equation}
\overline{\Gamma}^{(\infty)}(a,b)\ge\Lambda_{\rho}(a,b),\label{eq:NICDCL_ball}
\end{equation}
where 
\begin{equation}
\Lambda_{\rho}(a,b):=\Phi_{\rho}\big(\Phi^{-1}(a),\Phi^{-1}(b)\big).\label{eq:NICDcopula}
\end{equation}
\end{proposition}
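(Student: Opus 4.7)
The plan is to achieve the bound in \eqref{eq:NICDCL_ball} by choosing $(\calA_n,\calB_n)$ to be a pair of concentric Hamming balls around $0^n$ with carefully tuned radii, and then to pass to the limit via the bivariate central limit theorem. Fix $a,b\in(0,1)$ and set $\lambda:=\Phi^{-1}(a)$ and $\mu:=\Phi^{-1}(b)$; these are the target standardized thresholds. Since the feasibility constraints in \eqref{eqn:jp_forward} are $\pi_X^n(\calA_n)\le a$ and $\pi_Y^n(\calB_n)\le b$, and Hamming balls only approach these marginal probabilities in the limit, I will first work with a slightly shrunk pair of thresholds and then remove the slack.

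For any $\epsilon>0$, pick $\lambda_\epsilon<\lambda$ and $\mu_\epsilon<\mu$ with $\lambda-\lambda_\epsilon,\,\mu-\mu_\epsilon<\epsilon$. Set $r_n:=\lfloor n/2+\lambda_\epsilon\sqrt{n}/2\rfloor$, $s_n:=\lfloor n/2+\mu_\epsilon\sqrt{n}/2\rfloor$, and $\calA_n:=\bbB_{r_n}(0^n)$, $\calB_n:=\bbB_{s_n}(0^n)$. Using that $\sum_{i=1}^n X_i\sim\mathrm{Binomial}(n,1/2)$ under $\pi_X^n$ and $\sum_{i=1}^n Y_i\sim\mathrm{Binomial}(n,1/2)$ under $\pi_Y^n$, the univariate CLT (exactly as in \eqref{eqn:cl1}--\eqref{eqn:cl2}) gives $\pi_X^n(\calA_n)\to\Phi(\lambda_\epsilon)<a$ and $\pi_Y^n(\calB_n)\to\Phi(\mu_\epsilon)<b$. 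Thus for all sufficiently large $n$ the pair $(\calA_n,\calB_n)$ is admissible in \eqref{eqn:jp_forward}, giving $\overline{\Gamma}^{(n)}(a,b)\ge\pi_{XY}^n(\calA_n\times\calB_n)$.

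Next I would invoke the bivariate CLT for the i.i.d.\ $\{0,1\}^2$-valued sequence $\{(X_i,Y_i)\}_{i=1}^n\sim\pi_{XY}^n$. Each coordinate has mean $1/2$ and variance $1/4$, and from the parametrization in \eqref{eq:NICDDSBS} the Pearson correlation of $(X_1,Y_1)$ equals $\rho$. Consequently, the standardized vector $\big(2(\sum_i X_i-n/2)/\sqrt{n},\,2(\sum_i Y_i-n/2)/\sqrt{n}\big)$ converges in distribution to a centered bivariate normal with covariance $\bK$ of \eqref{eqn:cov_mat_rho}. Since the event $\calA_n\times\calB_n$ corresponds to $\{\sum_i X_i\le r_n\}\cap\{\sum_i Y_i\le s_n\}$, and the limit law places no mass on the boundary of the orthant $\{(u,v):u\le\lambda_\epsilon,\,v\le\mu_\epsilon\}$, the continuous mapping / Portmanteau theorem yields $\pi_{XY}^n(\calA_n\times\calB_n)\to\Phi_\rho(\lambda_\epsilon,\mu_\epsilon)$, exactly as in \eqref{eqn:biv}. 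Combining these observations gives $\overline{\Gamma}^{(\infty)}(a,b)\ge\Phi_\rho(\lambda_\epsilon,\mu_\epsilon)$.

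Finally, I would remove the slack by taking $\epsilon\downarrow 0$: since $\Phi_\rho$ is continuous on $\bbR^2$, $\Phi_\rho(\lambda_\epsilon,\mu_\epsilon)\to\Phi_\rho(\lambda,\mu)=\Lambda_\rho(a,b)$, proving \eqref{eq:NICDCL_ball}. The chief subtlety is exactly this $\epsilon$-perturbation, which bridges the weak inequality constraint $\pi_X^n(\calA_n)\le a$ with the fact that the CLT only delivers asymptotic convergence of marginal probabilities rather than a finite-$n$ inequality; once this slack-then-close technique is in place, the rest of the argument is a textbook application of the multivariate CLT to the Binomial marginals of a DSBS.
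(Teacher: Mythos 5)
Your proof is correct and follows essentially the same route as the paper's (concentric Hamming balls with CLT-tuned radii, bivariate CLT for the joint probability), the paper referring to \citet[Ex.~9.24 and~10.5]{ODonnell14analysisof}. The $\epsilon$-shrink-then-close step you add to reconcile the hard constraint $\pi_X^n(\calA_n)\le a$ with asymptotic convergence of the marginal measure is a useful bit of bookkeeping the paper glosses over, but it is not a different argument.
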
 Here $\Lambda_{\rho}(\cdot,\cdot)$ is known as
the {\em bivariate normal copula} or the {\em Gaussian quadrant
probability function}. Thanks to the equivalence between the forward
and reverse joint probabilities as stated in~\eqref{eqn:equiv_for_rev},
\eqref{eq:NICDCL_ball} can alternatively be expressed in terms of
the reverse joint probability as 
\begin{align}
 & \underline{\Gamma}^{(\infty)}(a,b)\le\Lambda_{-\rho}(a,b).\label{eq:NICDCL_ball-1}
\end{align}
The upper bound $\Lambda_{-\rho}(a,b)$ is achieved by a sequence
of pairs of anti-concentric balls $\calA_{n}=\mathbb{B}_{r_{n}}(0^{n})$
and $\calB_{n}=\mathbb{B}_{s_{n}}(1^{n})$.

Considering the exponents of the probabilities in~\eqref{eq:NICDCL_ball}
and \eqref{eq:NICDCL_ball-1}, 
\begin{align}
\underline{\Upsilon}_{\mathrm{CL}}^{(\infty)}(\alpha,\beta) & \le\underline{\Upsilon}_{\mathrm{CL}}(\alpha,\beta):=-\log\Lambda_{\rho}(2^{-\alpha},2^{-\beta})\quad\mbox{and}\label{eqn:Theta_under}\\
\overline{\Upsilon}_{\mathrm{CL}}^{(\infty)}(\alpha,\beta) & \ge\overline{\Upsilon}_{\mathrm{CL}}(\alpha,\beta):=-\log\Lambda_{-\rho}(2^{-\alpha},2^{-\beta}).\label{eqn:Theta_over}
\end{align}

We next consider the LD and MD regimes. Although it is certainly possible
to set $\calA_{n}$ and $\calB_{n}$ to be Hamming balls to obtain
achievability results for these two regimes, we prefer not to do so
here. This is because, it is much easier to derive the same results
by using Hamming \emph{spheres} or \emph{spherical shells}. Therefore, we consider the LD and
MD regimes in the following subsection after we introduce Hamming
spheres.

\subsection{Hamming Spheres}

\label{sec:spheres}

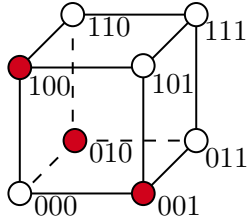
\begin{figure}
\centering \tikzset{every picture/.style={line width=0.75pt}}
\begin{tikzpicture}[x=0.75pt,y=0.75pt,yscale=-1,xscale=1,scale=0.50]    
\draw   (618.03,1896.01) -- (671.42,1842.62) -- (796,1842.62) -- (796,1969.78) -- (742.61,2023.17) -- (618.03,2023.17) -- cycle ;  \draw   (796,1842.62) -- (742.61,1896.01) -- (618.03,1896.01) ;  \draw   (742.61,1896.01) -- (742.61,2023.17) ; 
\draw  [dash pattern={on 4.5pt off 4.5pt}]  (671.42,1842.62) -- (671.42,1972.01) ; 
\draw  [dash pattern={on 4.5pt off 4.5pt}]  (673.65,1969.78) -- (618.03,2025.41) ; 
\draw  [dash pattern={on 4.5pt off 4.5pt}]  (671.42,1969.78) -- (796,1969.78) ; 
\draw  [fill={rgb, 255:red, 255; green, 255; blue, 255 }  ,fill opacity=1 ] (731.08,1894.22) .. controls (731.08,1887.74) and (736.21,1882.5) .. (742.54,1882.5) .. controls (748.87,1882.5) and (754,1887.74) .. (754,1894.22) .. controls (754,1900.69) and (748.87,1905.93) .. (742.54,1905.93) .. controls (736.21,1905.93) and (731.08,1900.69) .. (731.08,1894.22) -- cycle ; 
\draw  [fill={rgb, 255:red, 208; green, 2; blue, 27 }  ,fill opacity=1 ] (662.19,1969.78) .. controls (662.19,1963.31) and (667.32,1958.06) .. (673.65,1958.06) .. controls (679.98,1958.06) and (685.11,1963.31) .. (685.11,1969.78) .. controls (685.11,1976.25) and (679.98,1981.5) .. (673.65,1981.5) .. controls (667.32,1981.5) and (662.19,1976.25) .. (662.19,1969.78) -- cycle ; 
\draw  [fill={rgb, 255:red, 208; green, 2; blue, 27 }  ,fill opacity=1 ] (606.57,1896.01) .. controls (606.57,1889.54) and (611.7,1884.29) .. (618.03,1884.29) .. controls (624.35,1884.29) and (629.48,1889.54) .. (629.48,1896.01) .. controls (629.48,1902.48) and (624.35,1907.73) .. (618.03,1907.73) .. controls (611.7,1907.73) and (606.57,1902.48) .. (606.57,1896.01) -- cycle ; 
\draw  [fill={rgb, 255:red, 255; green, 255; blue, 255 }  ,fill opacity=1 ] (659.96,1842.62) .. controls (659.96,1836.14) and (665.09,1830.9) .. (671.42,1830.9) .. controls (677.75,1830.9) and (682.88,1836.14) .. (682.88,1842.62) .. controls (682.88,1849.09) and (677.75,1854.33) .. (671.42,1854.33) .. controls (665.09,1854.33) and (659.96,1849.09) .. (659.96,1842.62) -- cycle ; 
\draw  [fill={rgb, 255:red, 255; green, 255; blue, 255 }  ,fill opacity=1 ] (606.57,2023.17) .. controls (606.57,2016.7) and (611.7,2011.45) .. (618.03,2011.45) .. controls (624.35,2011.45) and (629.48,2016.7) .. (629.48,2023.17) .. controls (629.48,2029.64) and (624.35,2034.89) .. (618.03,2034.89) .. controls (611.7,2034.89) and (606.57,2029.64) .. (606.57,2023.17) -- cycle ; 
\draw  [fill={rgb, 255:red, 208; green, 2; blue, 27 }  ,fill opacity=1 ] (731.15,2023.17) .. controls (731.15,2016.7) and (736.28,2011.45) .. (742.61,2011.45) .. controls (748.94,2011.45) and (754.07,2016.7) .. (754.07,2023.17) .. controls (754.07,2029.64) and (748.94,2034.89) .. (742.61,2034.89) .. controls (736.28,2034.89) and (731.15,2029.64) .. (731.15,2023.17) -- cycle ; 
\draw  [fill={rgb, 255:red, 255; green, 255; blue, 255 }  ,fill opacity=1 ] (784.54,1969.78) .. controls (784.54,1963.31) and (789.67,1958.06) .. (796,1958.06) .. controls (802.33,1958.06) and (807.46,1963.31) .. (807.46,1969.78) .. controls (807.46,1976.25) and (802.33,1981.5) .. (796,1981.5) .. controls (789.67,1981.5) and (784.54,1976.25) .. (784.54,1969.78) -- cycle ; 
\draw  [fill={rgb, 255:red, 255; green, 255; blue, 255 }  ,fill opacity=1 ] (784.54,1842.62) .. controls (784.54,1836.14) and (789.67,1830.9) .. (796,1830.9) .. controls (802.33,1830.9) and (807.46,1836.14) .. (807.46,1842.62) .. controls (807.46,1849.09) and (802.33,1854.33) .. (796,1854.33) .. controls (789.67,1854.33) and (784.54,1849.09) .. (784.54,1842.62) -- cycle ;
\draw (629.48,2023.17) node [anchor=north west][inner sep=0.75pt]  [font=\normalsize]  {$000$}; 
\draw (622.97,1902.19) node [anchor=north west][inner sep=0.75pt]  [font=\normalsize]  {$100$}; 
\draw (802,1975.5) node [anchor=north west][inner sep=0.75pt]  [font=\normalsize]  {$011$}; 
\draw (685.11,1969.78) node [anchor=north west][inner sep=0.75pt]  [font=\normalsize]  {$010$}; 
\draw (754.07,2023.17) node [anchor=north west][inner sep=0.75pt]  [font=\normalsize]  {$001$}; 
\draw (801,1846.5) node [anchor=north west][inner sep=0.75pt]  [font=\normalsize]  {$111$}; 
\draw (682.88,1842.62) node [anchor=north west][inner sep=0.75pt]  [font=\normalsize]  {$110$}; 
\draw (748,1900.22) node [anchor=north west][inner sep=0.75pt]  [font=\normalsize]  {$101$};
\end{tikzpicture} \caption{\label{fig:sphere}A Hamming sphere (shaded) in $\{0,1\}^{3}$ centered
at $(0,0,0)$ with radius~$1$ }
\end{figure}

A {\em Hamming sphere} centered at $y^{n}\in\{0,1\}^{n}$ with
radius $r\in\{0,1,\ldots,n\}$   takes the form $\mathbb{S}_{r}(y^{n}):=\{x^{n}\in\{0,1\}^{n}:d_{\mathrm{H}}(x^{n},y^{n})=r\}$.
See Fig.~\ref{fig:sphere} for an illustration.
The definition of Hamming spheres differs from that for Hamming balls
in the condition $d_{\mathrm{H}}(x^{n},y^{n})=r$ in which {\em
equality} is mandated. Similarly to the previous subsection, here
we also only consider Hamming spheres centered at either $0^{n}$
or~$1^{n}$, for which we can rewrite them respectively as $\{x^{n}:\sum_{i=1}^{n}x_{i}=r\}$
and $\{x^{n}:\sum_{i=1}^{n}x_{i}=n-r\}$. These Hamming
spheres can be regarded as {\em type classes} with types $(\bar{\lambda},\lambda)$
and $(\lambda,\bar{\lambda})$ respectively in   Hamming space, where
$\lambda:=\frac{r}{n}$ and $\bar{\lambda}:=1-\lambda$. Observe that
  $\mathbb{S}_{r}(0^{n})$ is the same as 
$\mathbb{S}_{n-r}(1^{n})$. 
Notwithstanding this equivalence, we term a pair of spheres $\mathbb{S}_{r_{1}}(0^{n})$
and $\mathbb{S}_{r_{2}}(0^{n})$ as a pair of {\em concentric}
spheres if $r_{1},r_{2}\le n/2$ or $r_{1},r_{2}\ge n/2$, and as
a pair of {\em anti-concentric} spheres if $r_{1}\le n/2\le r_{2}$
or $r_{2}\le n/2\le r_{1}$.

For the LD regime, we choose $\calA_{n}$ and $\calB_{n}$ to be a
pair of concentric or anti-concentric Hamming spheres, i.e., $\calA_{n}=\mathbb{S}_{r_{n}}(0^{n})$
and $\calB_{n}=\mathbb{S}_{s_{n}}(0^{n})$ with $r_{n}=\lfloor\lambda n\rfloor$
or $\lceil\lambda n\rceil$ and $s_{n}=\lfloor\mu n\rfloor$ or $\lceil\mu n\rceil$,
where $\lambda,\mu\in[0,1]$. By Sanov's theorem~\cite{Dembo} (stated in Theorem~\ref{thm:sanov}), 
\begin{align}
\lim_{n\to\infty}-\frac{1}{n}\log \pi_{X}^{n}(\calA_{n}) & =D((\bar{\lambda},\lambda)\|\pi_{X})\quad\mbox{and}\\
\lim_{n\to\infty}-\frac{1}{n}\log \pi_{Y}^{n}(\calB_{n}) & =D((\bar{\mu},\mu)\|\pi_{Y}).
\end{align}
Since $X$ is uniform on $\{0,1\}$, we can write $D((\bar{\lambda},\lambda)\|\pi_{X})=1-h(\lambda)$.

For the joint probability, observe that the set $\calA_{n}\times\calB_{n}$
is a union of joint type classes with types $T_{XY}$ satisfying the
condition that its marginals $T_{X}$ and $T_{Y}$ are equal to $(\bar{\lambda},\lambda)$
and $(\bar{\mu},\mu)$ respectively. Hence, by Sanov's theorem,
the joint probability satisfies 
\begin{align}
\lim_{n\to\infty}-\frac{1}{n}\log \pi_{XY}^{n}(\calA_{n}\times\calB_{n})=\rvD((\bar{\lambda},\lambda),(\bar{\mu},\mu)\|\pi_{XY}),
\end{align}
\noindent where, in analogy to Definition~\ref{def:max_cross_ent}, the {\em
minimal relative entropy} with respect to $\pi_{XY}$ over all couplings
of $Q_{X}$ and $Q_{Y}$ is defined as 
\begin{align}
\rvD(Q_{X},Q_{Y}\|\pi_{XY}):=\min_{Q_{XY}\in\mathcal{C}(Q_{X},Q_{Y})}D(Q_{XY}\|\pi_{XY}).\label{eqn:min_rel_ent}
\end{align}

Optimizing the exponent $\rvD((\bar{\lambda},\lambda),(\bar{\mu},\mu)\|\pi_{XY})$
over all feasible pairs of $(\lambda,\mu)$, yields the following
achievability result.

\begin{proposition}\label{prop:LDsphere} For all $\alpha,\beta\in(0,1)$,
\begin{align}
\hspace{-.2in} \underline{\Upsilon}_{\mathrm{LD}}^{(\infty)}(\alpha,\beta) & \le\underline{\Upsilon}_{\mathrm{LD}}(\alpha,\beta)\\
&:=\min_{\substack{Q_X, Q_Y: \\ D(Q_{X}\|\pi_{X})\ge\alpha, \, D(Q_{Y}\|\pi_{Y})\ge\beta}} \rvD(Q_{X},Q_{Y}\|\pi_{XY}),\label{eqn:LDsphere1}
\end{align}
and
\begin{align}
\hspace{-.2in}\overline{\Upsilon}_{\mathrm{LD}}^{(\infty)}(\alpha,\beta) & \ge\overline{\Upsilon}_{\mathrm{LD}}(\alpha,\beta) \\
&:=\min_{\substack{Q_X, Q_Y: \\ D(Q_{X}\|\pi_{X})\le\alpha, \, D(Q_{Y}\|\pi_{Y})\le\beta}}\rvD(Q_{X},Q_{Y}\|\pi_{XY}).\label{eqn:LDsphere2}
\end{align}
\end{proposition}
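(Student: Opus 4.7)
Both bounds follow the same achievability template: select a pair of Hamming spheres whose marginal types realize the optimizer on the right-hand side, then apply Sanov's theorem (Theorem~\ref{thm:sanov}) to estimate both the marginal and the joint probabilities. I describe the forward bound~\eqref{eqn:LDsphere1} in detail; the reverse bound~\eqref{eqn:LDsphere2} is entirely analogous, with the marginal inequalities flipped and with anti-concentric spheres $\calB_n = \mathbb{S}_{s_n}(1^n)$ used whenever the optimizer $Q_Y^*$ is concentrated near $1$. Fix $(Q_X^*, Q_Y^*) = ((\bar\lambda, \lambda), (\bar\mu, \mu))$ feasible for~\eqref{eqn:LDsphere1}, that is, with $D(Q_X^* \| \pi_X) \ge \alpha$ and $D(Q_Y^* \| \pi_Y) \ge \beta$. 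Choose integer radii $r_n, s_n$ with $r_n/n \to \lambda$ and $s_n/n \to \mu$, picked on the ``safe side'' of $1/2$ so that the induced rational types $Q_X^{(n)} := (1 - r_n/n, r_n/n)$ and $Q_Y^{(n)} := (1 - s_n/n, s_n/n)$ also satisfy $D(Q_X^{(n)} \| \pi_X) \ge \alpha$ and $D(Q_Y^{(n)} \| \pi_Y) \ge \beta$ (this is legitimate by the monotonicity of $1 - h(\cdot)$ about $1/2$). Set $\calA_n := \mathbb{S}_{r_n}(0^n)$ and $\calB_n := \mathbb{S}_{s_n}(0^n)$.

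\textbf{Key estimate.} Because $\calA_n$ is a single type class of type $Q_X^{(n)}$, the usual type-counting bound gives $\pi_X^n(\calA_n) \le 2^{-n D(Q_X^{(n)} \| \pi_X)} \le 2^{-n\alpha}$, and similarly for $\calB_n$, so both marginal constraints of~\eqref{eqn:jp_for} hold exactly. For the joint probability, $\calA_n \times \calB_n$ decomposes as the disjoint union of joint type classes $\calT_{Q_{XY}}$ indexed by couplings $Q_{XY} \in \calC(Q_X^{(n)}, Q_Y^{(n)})$, each of which satisfies $\pi_{XY}^n(\calT_{Q_{XY}}) \doteq 2^{-n D(Q_{XY} \| \pi_{XY})}$. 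Since the number of joint types is polynomial in $n$, summing and isolating the dominant exponent yields
\[
\pi_{XY}^n(\calA_n \times \calB_n) \doteq 2^{-n \rvD(Q_X^{(n)}, Q_Y^{(n)} \| \pi_{XY})}.
\]
By the continuity of $(Q_X, Q_Y) \mapsto \rvD(Q_X, Q_Y \| \pi_{XY})$ on the probability simplex together with $(Q_X^{(n)}, Q_Y^{(n)}) \to (Q_X^*, Q_Y^*)$, this exponent converges to $\rvD(Q_X^*, Q_Y^* \| \pi_{XY})$. Combining with the trivial lower bound $\overline{\Gamma}^{(n)}(2^{-n\alpha}, 2^{-n\beta}) \ge \pi_{XY}^n(\calA_n \times \calB_n)$ gives $\underline{\Upsilon}_{\mathrm{LD}}^{(\infty)}(\alpha, \beta) \le \rvD(Q_X^*, Q_Y^* \| \pi_{XY})$, and infimizing over feasible $(Q_X^*, Q_Y^*)$ yields~\eqref{eqn:LDsphere1}.

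\textbf{Main hurdle.} The only ingredient that goes beyond a single invocation of Theorem~\ref{thm:sanov} is the joint Sanov-type estimate above, which requires controlling a union of joint type classes sharing prescribed marginal types. The dominant joint type is precisely the coupling $Q_{XY}^*$ that minimizes $D(Q_{XY} \| \pi_{XY})$ over $\calC(Q_X^{(n)}, Q_Y^{(n)})$---exactly the quantity appearing in the definition~\eqref{eqn:min_rel_ent} of $\rvD$. This follows from the method of types: all $O(n^{|\calX||\calY|})$ joint types contribute at the same $\doteq$ exponent scale, and the dominant one is identified by the minimization defining $\rvD$. For the reverse bound the same argument applies, with $(Q_X^*, Q_Y^*)$ now chosen as the optimizer of~\eqref{eqn:LDsphere2} over $\{D(Q_X\|\pi_X) \le \alpha\} \cap \{D(Q_Y\|\pi_Y) \le \beta\}$ and with the marginal type discretizations rounded toward $\pi_X, \pi_Y$ (i.e., toward $1/2$) to preserve the marginal inequality $\pi_X^n(\calA_n) \ge 2^{-n\alpha}$.
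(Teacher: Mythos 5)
Your proof is correct and mirrors the paper's own approach: concentric (resp.\ anti-concentric) Hamming spheres whose types approximate a feasible optimizer, together with the method of types (Sanov's theorem, Theorem~\ref{thm:sanov}) to estimate both the marginal probability of each sphere and the joint probability of the product set as a union of joint type classes sharing the prescribed marginal types. The safe-side rounding of the radii, which you spell out to preserve the marginal inequality at every finite $n$, is correct and fills in what the paper's terse sketch leaves implicit. As a side remark, the $\min$ appearing in~\eqref{eqn:LDsphere2} of the statement is a typo for $\max$ (compare~\eqref{eqn:LDsphere2_FI} later in the paper, and note that $\min_{D(Q_X\|\pi_X)\le\alpha,\;D(Q_Y\|\pi_Y)\le\beta}\rvD(Q_X,Q_Y\|\pi_{XY})=0$ always, since $(\pi_X,\pi_Y)$ is feasible and $\pi_{XY}\in\calC(\pi_X,\pi_Y)$); the construction you describe, optimizing over anti-concentric spheres, is exactly what the intended $\max$ version requires.
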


The bounds in \eqref{eqn:LDsphere1} and \eqref{eqn:LDsphere2} are
 attained by sequences of concentric and anti-concentric
Hamming spheres respectively. By the method of types, it is easy to observe that
they also can be respectively attained by sequences of concentric
and anti-concentric {\em balls} 
(since a  Hamming  ball consists of several spheres and there is one sphere that dominates the others in the sense of the exponent).
The above inequalities were conjectured to be tight by \citet{ordentlich2020note}.
We refer to this as the {\em OPS conjecture} in the sequel.

\begin{conjecture}[OPS Conjecture] \label{conj:ordentlich2020note}
For the DSBS and $\alpha,\beta\in(0,1)$, 
\begin{align}
\underline{\Upsilon}_{\mathrm{LD}}^{(\infty)}(\alpha,\beta)\stackrel{?}{=}\underline{\Upsilon}_{\mathrm{LD}}(\alpha,\beta)\quad\mbox{and}\quad\overline{\Upsilon}_{\mathrm{LD}}^{(\infty)}(\alpha,\beta)\stackrel{?}{=}\overline{\Upsilon}_{\mathrm{LD}}(\alpha,\beta).
\end{align}
\end{conjecture}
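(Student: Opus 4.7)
The plan is to establish the converse directions, since Proposition~\ref{prop:LDsphere} already supplies the matching achievability bounds via concentric and anti-concentric Hamming spheres. I will focus on the forward exponent, as the reverse part follows analogously via the duality \eqref{eqn:equiv_for_rev}. My strategy combines a symmetrization step with single-letterization by Sanov's theorem.

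First, I would reduce to permutation-symmetric pairs of sets. Given arbitrary $(\calA, \calB) \subset \{0,1\}^n \times \{0,1\}^n$ with $\pi_X^n(\calA) \le 2^{-n\alpha}$ and $\pi_Y^n(\calB) \le 2^{-n\beta}$, the goal is to produce symmetrized sets $(\calA^*, \calB^*)$ of the same marginal sizes that do not decrease the joint mass $\pi_{XY}^n(\calA \times \calB)$. Because $\pi_X^n$ is uniform, any coordinate permutation preserves $\pi_X^n(\calA)$, and the DSBS kernel $\pi_{Y|X}^n$ is invariant under the diagonal $S_n$-action. The hope is that a two-sided $ij$-shifting---swapping coordinates in $\calA$ and $\calB$ in a correlated fashion---can be iterated to force $\calA^*$ and $\calB^*$ to be unions of Hamming spheres $\bigcup_{r \in \calR_A} \mathbb{S}_r(0^n)$ and $\bigcup_{s \in \calR_B} \mathbb{S}_s(0^n)$.

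Once $(\calA, \calB)$ are unions of spheres, both the marginal probabilities $\pi_X^n(\calA) = \sum_{r \in \calR_A} \binom{n}{r} 2^{-n}$ and the joint probability decompose as finite sums over pairs of radii $(r, s)$. Sanov's theorem (Theorem~\ref{thm:sanov}) then governs the exponents: the marginal exponent is $\inf_{Q_X} D(Q_X \| \pi_X)$ subject to $(1 - r/n, r/n) \to Q_X$ for some $r \in \calR_A$, while the joint exponent is $\inf \rvD(Q_X, Q_Y \| \pi_{XY})$ over $(Q_X, Q_Y)$ in the corresponding limiting sets. Dualizing the constraints $D(Q_X \| \pi_X) \ge \alpha$ and $D(Q_Y \| \pi_Y) \ge \beta$ recovers precisely the single-letter expression $\underline{\Upsilon}_{\mathrm{LD}}(\alpha, \beta)$ in \eqref{eqn:LDsphere1}.

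The main obstacle is the first step, and it is the reason Conjecture~\ref{conj:ordentlich2020note} remains open. Classical compression techniques (Kruskal--Katona, Kleitman, Harper) act on a \emph{single} set and exploit monotonicity against a fixed down-set ordering; here one needs a \emph{joint} compression that simultaneously symmetrizes two sets against the DSBS noise kernel. Applying the same shift $\sigma_{ij}$ to both $\calA$ and $\calB$ preserves $\pi_{XY}^n(\calA \times \calB)$ exactly but need not drive them toward spheres, whereas shifting only one of the two sets can change the joint probability in either direction depending on the local structure of the other. An alternative route is Fourier-analytic: writing $\pi_{XY}^n(\calA \times \calB) = \sum_S \rho^{|S|} \hat{f}_S \hat{g}_S$ for $f = \bone_\calA$, $g = \bone_\calB$, one could attempt to upper-bound this sum under the constraints $\hat{f}_\emptyset \le 2^{-n\alpha}$, $\hat{g}_\emptyset \le 2^{-n\beta}$. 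Bonami--Beckner hypercontractivity yields the right exponent only at the corner cases $\alpha = 0$ or $\beta = 0$, so the interior regime appears to require a genuinely new functional inequality---possibly a $q$-stability-type refinement in the spirit of the material foreshadowed in Section~\ref{ch:Stability}---that pins down the extremizers throughout the $(\alpha, \beta)$ rectangle.
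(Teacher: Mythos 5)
Your conclusion that the conjecture ``remains open'' contradicts the paper. The monograph resolves the OPS conjecture unconditionally in Section~\ref{sec:ldr}, and the converse is \emph{not} obtained by symmetrization. It combines two ingredients: (i)~the strong SSE theorem (Theorem~\ref{thm:strongsse}, proved in Section~\ref{sec:NICD-Stability}), which yields the dimension-free bounds $\underline{\Upsilon}_{\mathrm{LD}}^{(n)}(\alpha,\beta) \ge \cvx[\underline{\Upsilon}_{\mathrm{LD}}](\alpha,\beta)$ and $\overline{\Upsilon}_{\mathrm{LD}}^{(n)}(\alpha,\beta) \le \cve[\overline{\Upsilon}_{\mathrm{LD}}](\alpha,\beta)$; and (ii)~the result of \citet{yu2021convexity} that, for the DSBS, $\underline{\Upsilon}_{\mathrm{LD}}$ is already convex and $\overline{\Upsilon}_{\mathrm{LD}}$ is already concave, so the envelope operations are vacuous and the strong SSE bounds collapse to $\underline{\Upsilon}_{\mathrm{LD}}$ and $\overline{\Upsilon}_{\mathrm{LD}}$ themselves. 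Paired with the achievability of Proposition~\ref{prop:LDsphere}, this gives \eqref{eq:opstrue}.

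Your two diagnostic remarks are both accurate but both misdirect you. You are right that one-at-a-time shifting (Kruskal--Katona, Kleitman, Harper) does not go through jointly against the DSBS kernel---nobody has made that work---but this only blocks a direct combinatorial route, not the problem. You are also right that vanilla hypercontractivity gives the right exponent only as $\alpha,\beta\downarrow 0$: that is exactly the failure of the plain SSE theorem (Theorem~\ref{thm:sse}), which is sharp only in the MD regime. The missing idea is not a single sharper functional inequality but the subgradient trick in the proof of Theorem~\ref{thm:strongsse-2}: for a given $(\alpha,\beta)$, take a subgradient $(u,v)$ of the envelope $\cvx[\underline{\Upsilon}_{\mathrm{LD}}]$ at $(\alpha,\beta)$ and apply the BL inequality \eqref{eq:FIBLFSSE-1} with the matched parameters $(p,q)=(1/u,1/v)$. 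Sweeping over $(\alpha,\beta)$ in this way extracts the full envelope rather than one hypercontractive line, and the last step---convexity of $\underline{\Upsilon}_{\mathrm{LD}}$ and concavity of $\overline{\Upsilon}_{\mathrm{LD}}$, which makes the envelope redundant---is precisely the nontrivial piece your proposal has no route to, which is why you are left stuck.
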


In Section~\ref{sec:ldr}, we   discuss the optimality of Hamming spheres
in the LD regime, leading to the proof this conjecture. However, before
doing this, we first focus on achievability results by Hamming spherical shells in the MD regime.

For the MD regime, we choose the sets in the NICD problem to be two
spherical shells (annuli), with thickness in the order of $\sqrt{n\theta_{n}}$.
Specifically, for a fixed and small $\epsilon>0$, we choose 
\begin{equation}
\!\!\calA_{n}=\bigcup_{r\in n/2+[\lambda,\lambda+\epsilon]\sqrt{n\theta_{n}}}\mathbb{S}_{r}(0^{n})\;\;\;\;\mbox{and}\;\;\;\;\calB_{n} =\bigcup_{s\in n/2+[\mu,\mu+\epsilon]\sqrt{n\theta_{n}}}\mathbb{S}_{s}(0^{n}),
\end{equation}
where $\{\theta_{n}\}$ is an MD sequence, and $\lambda,\mu\in\mathbb{R}$.
In other words, we choose $\calA_{n}$ and $\calB_{n}$ to be unions
of type classes induced by types $Q_{X}=\pi_{X}+\sqrt{{\theta_{n}}/{n}}\, \eta_{X}$
and $Q_{Y}=\pi_{Y}+\sqrt{{\theta_{n}}/{n}}\, \eta_{Y}$ respectively,
where $\eta_{X}$ and $\eta_{Y}$ are  functions such that $\sum_{x\in\{0,1\}}\eta_{X}(x)=0$
and $\sum_{y\in\{0,1\}}\eta_{Y}(y)=0$ and $\eta_{X}(1)\in[\lambda,\lambda+\epsilon]$
and $\eta_{Y}(1)\in[\mu,\mu+\epsilon]$. 
Let 
\begin{equation}
\hat{\chi}^{2}(\eta\|\pi):=\sum_{x\in\{0,1\}}\frac{\eta(x)^{2}}{\pi(x)}.
\end{equation}
and notice that $\hat{\chi}^2(Q-\pi \|\pi)$ is the chi-squared divergence from $Q$ to~$\pi$.
In analogy to the minimal relative entropy in \eqref{eqn:min_rel_ent},
we define 
\begin{equation}
\hat{\rvX}^{2}(\eta_{X},\eta_{Y}\|\pi_{XY}):=\inf_{\eta_{XY}\in\overline{\mathcal{C}}(\eta_{X},\eta_{Y})}\hat{\chi}^{2}(\eta_{XY}\|\pi_{XY}),\label{eq:NICDminpsi}
\end{equation}
where $\overline{\mathcal{C}}(\eta_{X},\eta_{Y})$ is the set of all
bivariate functions $\eta_{XY}:\{0,1\}^{2}\to\bbR$ such that their
$X$- and $Y$-marginals are equal to $\eta_{X}$ and $\eta_{Y}$ respectively
and $\sum_{x,y}\eta_{XY}(x,y)=0$. 
\enlargethispage{\baselineskip} Then, letting $\theta_{n}\to\infty$ and then $\epsilon\downarrow0$,
by the moderate deviations theorem~\cite{wu1994large,Dembo},
\begin{align}
\lim_{n\to\infty}-\frac{1}{\theta_{n}}\log \pi_{X}^{n}(\calA_{n}) & =\frac{1}{2}\hat{\chi}^{2}(\eta_{X}\|\pi_{X}),\label{eq:NICDmd}\\
\lim_{n\to\infty}-\frac{1}{\theta_{n}}\log \pi_{Y}^{n}(\calB_{n}) & =\frac{1}{2}\hat{\chi}^{2}(\eta_{Y}\|\pi_{Y}), \quad\mbox{and}\\
\lim_{n\to\infty}-\frac{1}{\theta_{n}}\log \pi_{XY}^{n}(\calA_{n}\times\calB_{n}) & =\frac{1}{2}\hat{\rvX}^{2}(\eta_{X},\eta_{Y}\|\pi_{XY}).\label{eq:NICDmd-1}
\end{align}
In fact,~\eqref{eq:NICDmd-1} requires the continuity of $(\eta_{X},\eta_{Y})\mapsto\hat{\rvX}^{2}(\eta_{X},\eta_{Y}\|\pi_{XY})$;
this follows from the following lemma. 
\begin{lemma} \label{lem:min_chi_square}
For $\eta_{X}=(-\lambda,\lambda)$ and $\eta_{Y}=(-\mu,\mu)$, we have 
\begin{align}
\hat{\rvX}^{2}(\eta_{X},\eta_{Y}\|\pi_{XY}) & =\frac{2(\lambda+\mu)^{2}}{1+\rho}+\frac{2(\lambda-\mu)^{2}}{1-\rho}.\label{eq:NICDX2}
\end{align}
\end{lemma}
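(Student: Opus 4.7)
The plan is to parameterize the affine set $\overline{\mathcal{C}}(\eta_X,\eta_Y)$ by a single real variable, substitute into the chi-squared functional to obtain a scalar convex quadratic, and minimize in closed form.

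First I would observe that the constraints defining $\overline{\mathcal{C}}(\eta_X,\eta_Y)$---the two $X$-marginal conditions, the two $Y$-marginal conditions, and the zero-sum condition---are redundant and amount to three independent linear equations on the four unknowns $\eta_{XY}(x,y)$. Setting $t:=\eta_{XY}(0,0)$, the remaining entries are uniquely determined as
\begin{equation}
\eta_{XY}(0,1) = -\lambda-t,\qquad \eta_{XY}(1,0) = -\mu-t,\qquad \eta_{XY}(1,1) = \lambda+\mu+t,
\end{equation}
so that $\overline{\mathcal{C}}(\eta_X,\eta_Y)$ is a one-parameter family indexed by $t\in\mathbb{R}$.

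Next, substituting these expressions into $\hat{\chi}^2(\eta_{XY}\|\pi_{XY}) = \sum_{x,y}\eta_{XY}(x,y)^2/\pi_{XY}(x,y)$ and using the DSBS values $\pi_{XY}(x,y) = (1+\rho)/4$ on the diagonal and $(1-\rho)/4$ off-diagonal, yields
\begin{equation}
\hat{\chi}^2(\eta_{XY}\|\pi_{XY}) = \frac{4\bigl[t^2+(\lambda+\mu+t)^2\bigr]}{1+\rho} + \frac{4\bigl[(\lambda+t)^2+(\mu+t)^2\bigr]}{1-\rho},
\end{equation}
which is a strictly convex quadratic in $t$. Differentiating and setting the derivative to zero gives $2t+\lambda+\mu=0$, so $t^\ast = -(\lambda+\mu)/2$. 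The minimizing function has the symmetric form $\eta_{XY}^\ast(0,0) = \eta_{XY}^\ast(1,1) = (\lambda+\mu)/2$ (up to sign) and $\eta_{XY}^\ast(0,1) = -\eta_{XY}^\ast(1,0) = (\mu-\lambda)/2$. Substituting $t^\ast$ back into the displayed expression collapses each bracket into twice a single square and produces exactly the right-hand side of \eqref{eq:NICDX2}.

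I do not anticipate a substantive obstacle here, since the problem reduces to minimizing a one-dimensional strictly convex quadratic. The only conditions to verify are that the affine set $\overline{\mathcal{C}}(\eta_X,\eta_Y)$ is nonempty (which holds because $\eta_X$ and $\eta_Y$ both sum to zero by construction) and that the infimum in \eqref{eq:NICDminpsi} is attained (automatic by coercivity of the quadratic), so in fact the ``$\inf$'' can be written as a ``$\min$''.
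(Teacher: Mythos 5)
Your proposal is correct and matches the paper's approach: the paper's proof simply asserts the minimizing function $\eta_{XY}$ (which coincides with your $\eta_{XY}^\ast$ obtained at $t^\ast = -(\lambda+\mu)/2$) and states that \eqref{eq:NICDX2} follows by substitution, while you make explicit the one-parameter reduction and the scalar calculus that produce it. One small note: the paper's proof says the optimal $\eta_{XY}$ attains the ``maximum'' in the definition of $\hat{\rvX}^2$, which is a typo for ``minimum'' (cf.\ \eqref{eq:NICDminpsi}); you correctly treat the problem as a minimization and justify attainment by strict convexity and coercivity.
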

\begin{proof}
One can calculate that the optimal $\eta_{XY}$ attaining the
maximum in the definition of $\hat{\rvX}^{2}(\eta_{X},\eta_{Y}\|\pi_{XY})$
is 
\begin{equation}
\eta_{XY}=\begin{bmatrix}p-\lambda-\mu & \mu-p\\
\lambda-p & p
\end{bmatrix},\label{eq:NICDDSBS-1}
\end{equation}
where $p=(\lambda+\mu)/{2}$. Hence, \eqref{eq:NICDX2} follows. 
\end{proof}

Optimizing the exponent $\frac{1}{2}\hat{\rvX}^{2}(\eta_{X},\eta_{Y}\|\pi_{XY})$
over all feasible $\eta_{X}=(-\lambda,\lambda)$ and $\eta_{Y}=(-\mu,\mu)$
yields the following proposition. 

 \begin{proposition} \label{prop:MDsphere} For $\alpha,\beta>0$,
\begin{align}
\underline{\Upsilon}_{\mathrm{MD}}^{(\infty)}(\alpha,\beta)\le\underline{\Upsilon}_{\mathrm{MD}}(\alpha,\beta) & :=\inf\hat{\rvX}^{2}(\eta_{X},\eta_{Y}\|\pi_{XY})\quad\mbox{and}\label{eq:NICDmdexponent}\\
\overline{\Upsilon}_{\mathrm{MD}}^{(\infty)}(\alpha,\beta)\ge\overline{\Upsilon}_{\mathrm{MD}}(\alpha,\beta) & :=\sup\hat{\rvX}^{2}(\eta_{X},\eta_{Y}\|\pi_{XY}).\label{eq:NICDmdexponent-1}
\end{align}
where the $\inf$ in \eqref{eq:NICDmdexponent} is over the set of
functions $\eta_{X},\eta_{Y}:\{0,1\}\to\bbR$ such that $\sum_{x}\eta_{X}(x)=\sum_{y}\eta_{Y}(y)=0$
and 
\begin{align}
\hat{\chi}^{2}(\eta_{X}\|\pi_{X})\ge\alpha\quad\mbox{and}\quad\hat{\chi}^{2}(\eta_{Y}\|\pi_{Y})\ge\beta,\label{eq:NICDx2const}
\end{align}
and the $\sup$ in \eqref{eq:NICDmdexponent-1} is over the same set
of functions $(\eta_{X},\eta_{Y})$ but with the directions of the  inequalities
in \eqref{eq:NICDx2const} reversed. \end{proposition}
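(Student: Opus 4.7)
The plan is to derive the two achievability bounds as direct consequences of the single-letter moderate deviations estimates \eqref{eq:NICDmd}--\eqref{eq:NICDmd-1} (which are granted), followed by optimization over the admissible pairs of real-valued functions $(\eta_X,\eta_Y)$. In other words, once the MD probabilities of a single pair of shells are understood, the proposition is essentially an extremization.

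\emph{Forward bound \eqref{eq:NICDmdexponent}.} For any pair $(\eta_X,\eta_Y)$ with $\sum_x\eta_X(x)=\sum_y\eta_Y(y)=0$ lying in the \emph{interior} of the feasible set in \eqref{eq:NICDx2const}, and any small $\epsilon>0$, I would construct the spherical shells $\calA_n,\calB_n$ exactly as in the paragraph preceding \eqref{eq:NICDmd}. The marginal estimate \eqref{eq:NICDmd}, together with the strict feasibility of $(\eta_X,\eta_Y)$, then implies $\pi_X^n(\calA_n)\le 2^{-\theta_n\alpha}$ and $\pi_Y^n(\calB_n)\le 2^{-\theta_n\beta}$ for all $n$ sufficiently large, so that $(\calA_n,\calB_n)$ is admissible in the maximization defining $\overline{\Gamma}^{(n)}(2^{-\theta_n\alpha},2^{-\theta_n\beta})$. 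The joint estimate \eqref{eq:NICDmd-1} then yields
\begin{equation*}
\limsup_{n\to\infty}\underline{\Upsilon}^{(n)}_{\mathrm{MD}}(\alpha,\beta)\le\tfrac12\hat{\rvX}^2(\eta_X,\eta_Y\|\pi_{XY}).
\end{equation*}
Letting $\epsilon\downarrow 0$, infimizing over all such $(\eta_X,\eta_Y)$, and extending from the interior to the closure of the feasible set via continuity of $(\eta_X,\eta_Y)\mapsto\hat{\rvX}^2(\eta_X,\eta_Y\|\pi_{XY})$ (made explicit by Lemma~\ref{lem:min_chi_square}) delivers \eqref{eq:NICDmdexponent}.

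\emph{Reverse bound \eqref{eq:NICDmdexponent-1}.} I would take $\calA_n$ and $\calB_n$ as \emph{anti-concentric} spherical shells, that is, unions of Hamming spheres centered respectively at $0^n$ and $1^n$, with radii tuned so that the associated functions $\eta_X,\eta_Y$ lie in the interior of the reversed constraint region $\hat\chi^2(\eta_X\|\pi_X)<\alpha$, $\hat\chi^2(\eta_Y\|\pi_Y)<\beta$. Then \eqref{eq:NICDmd} gives $\pi_X^n(\calA_n)\ge 2^{-\theta_n\alpha}$ and $\pi_Y^n(\calB_n)\ge 2^{-\theta_n\beta}$ for large $n$, so $(\calA_n,\calB_n)$ is admissible in the minimization defining $\underline{\Gamma}^{(n)}(2^{-\theta_n\alpha},2^{-\theta_n\beta})$; the joint estimate \eqref{eq:NICDmd-1} (with the effective sign of $\rho$ flipped by the anti-concentric relabeling) then supplies the desired lower bound on the exponent, and the supremum over such $(\eta_X,\eta_Y)$ yields \eqref{eq:NICDmdexponent-1}.

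Because the MD inputs are taken as given, the only delicate step is the closure at the boundary of the feasible region: one must swap a $\limsup$ (resp.\ $\liminf$) with an infimum (resp.\ supremum) along the constraint surface. I would handle this by a standard perturbation argument, exploiting that $(\eta_X,\eta_Y)\mapsto\hat{\rvX}^2(\eta_X,\eta_Y\|\pi_{XY})$ is quadratic, hence jointly continuous, on the relative interior of the admissible set (as visible from the closed form in Lemma~\ref{lem:min_chi_square}, which extends verbatim to general zero-sum perturbations on $\{0,1\}^2$). A one-parameter perturbation that shifts the constraint just inside the feasible region produces an exponent that is continuous in the perturbation and therefore recovers the boundary value in the limit; this is the main, if modest, obstacle in turning the heuristic optimization into a rigorous argument.
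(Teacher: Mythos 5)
Your strategy matches the paper's: plug the moderate-deviations estimates \eqref{eq:NICDmd}--\eqref{eq:NICDmd-1} for (anti-)concentric shells into $\underline{\Upsilon}^{(n)}_{\mathrm{MD}}$ and $\overline{\Upsilon}^{(n)}_{\mathrm{MD}}$ and optimize over $(\eta_X,\eta_Y)$, handling boundary points by strict-feasibility perturbation and continuity of $\hat{\rvX}^2$. That structure is sound, but your admissibility check is off by a factor of two and, as written, the chain does not close. You take $(\eta_X,\eta_Y)$ with $\hat{\chi}^2(\eta_X\|\pi_X)>\alpha$ (strictly feasible for \eqref{eq:NICDx2const}) and claim \eqref{eq:NICDmd} forces $\pi_X^n(\calA_n)\le 2^{-\theta_n\alpha}$ for large $n$. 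But \eqref{eq:NICDmd} gives marginal exponent $\frac{1}{2}\hat{\chi}^2(\eta_X\|\pi_X)$, so admissibility needs $\hat{\chi}^2(\eta_X\|\pi_X)>2\alpha$; for $\alpha<\hat{\chi}^2(\eta_X\|\pi_X)<2\alpha$ your shells are inadmissible. If one ignored this and infimized anyway, the conclusion would be $\underline{\Upsilon}^{(\infty)}_{\mathrm{MD}}(\alpha,\beta)\le\tfrac{1}{2}\underline{\Upsilon}_{\mathrm{MD}}(\alpha,\beta)$, twice as strong as \eqref{eq:NICDmdexponent} and inconsistent with the matching SSE converse (Theorem~\ref{thm:sse}) since $\underline{\Upsilon}_{\mathrm{MD}}>0$ when $\alpha,\beta>0$.

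The repair is exactly the degree-$1$ homogeneity the paper notes right after the proposition (cf.\ \eqref{eq:NICD-6}--\eqref{eq:NICD-12}), rooted in the fact that $\hat{\chi}^2$ and $\hat{\rvX}^2$ are quadratic in $(\eta_X,\eta_Y)$. Run your argument with the \emph{correct} admissibility constraints $\frac{1}{2}\hat{\chi}^2(\eta_X\|\pi_X)>\alpha$ and $\frac{1}{2}\hat{\chi}^2(\eta_Y\|\pi_Y)>\beta$, obtain the bound $\frac{1}{2}\hat{\rvX}^2(\eta_X,\eta_Y\|\pi_{XY})$ on the joint exponent, and then substitute $(\eta_X,\eta_Y)\mapsto(\eta_X/\sqrt{2},\eta_Y/\sqrt{2})$: the halves in the objective and in both constraints cancel, giving \eqref{eq:NICDmdexponent} verbatim. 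The same rescaling repairs the reverse direction \eqref{eq:NICDmdexponent-1}.
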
 The bounds
in \eqref{eq:NICDmdexponent} and \eqref{eq:NICDmdexponent-1} are
respectively attained by sequences of concentric and anti-concentric
Hamming spheres or balls. The reader may have noticed that
the constant $1/2$ in \eqref{eq:NICDmd}--\eqref{eq:NICDmd-1} has
been removed in \eqref{eq:NICDmdexponent} and \eqref{eq:NICDmdexponent-1}.
This is because, by definition, $\underline{\Upsilon}_{\mathrm{MD}}$
and $\overline{\Upsilon}_{\mathrm{MD}}$ are {\em homogeneous} (of
degree $1$), i.e., for any $\gamma>0$, 
\begin{align}
\underline{\Upsilon}_{\mathrm{MD}}(\gamma\alpha,\gamma\beta) & =\gamma\underline{\Upsilon}_{\mathrm{MD}}(\alpha,\beta)\quad\mbox{and}\label{eq:NICD-6}\\
\overline{\Upsilon}_{\mathrm{MD}}(\gamma\alpha,\gamma\beta) & =\gamma\overline{\Upsilon}_{\mathrm{MD}}(\alpha,\beta).\label{eq:NICD-12}
\end{align}

The bounds in \eqref{eq:NICDmdexponent} and \eqref{eq:NICDmdexponent-1}
can be further simplified as follows.

\begin{lemma} For $\alpha,\beta>0$, 
\begin{align}
\hspace{-.25in}\underline{\Upsilon}_{\mathrm{MD}}(\alpha,\beta) & =\begin{cases}
{\displaystyle \frac{\alpha+\beta-2\rho\sqrt{\alpha\beta}}{1-\rho^{2}}} & \rho^{2}\alpha\le\beta\le\frac{\alpha}{\rho^{2}}\\
\alpha & \beta<\rho^{2}\alpha\\
\beta & \alpha<\rho^{2}\beta
\end{cases}\quad\mbox{and}\label{eq:NICD-MD}\\
\hspace{-.25in}\overline{\Upsilon}_{\mathrm{MD}}(\alpha,\beta) & =\frac{\alpha+\beta+2\rho\sqrt{\alpha\beta}}{1-\rho^{2}}.\label{eq:NICD-MD2}
\end{align}
\end{lemma}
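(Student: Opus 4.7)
The plan is a direct calculation: parametrize the feasible signed measures, reduce $\hat{\rvX}^{2}(\eta_{X},\eta_{Y}\|\pi_{XY})$ to a concrete quadratic form in two real variables, and then optimize this quadratic over a simple region in $\mathbb{R}^{2}$ by a boundary/Lagrangian analysis.

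First I would exploit that $\pi_{X}=\pi_{Y}=\mathrm{Bern}(1/2)$ and that $\eta_{X},\eta_{Y}$ sum to zero, so necessarily $\eta_{X}=(-\lambda,\lambda)$ and $\eta_{Y}=(-\mu,\mu)$ for some $\lambda,\mu\in\mathbb{R}$. A one-line computation gives $\hat{\chi}^{2}(\eta_{X}\|\pi_{X})=4\lambda^{2}$ and $\hat{\chi}^{2}(\eta_{Y}\|\pi_{Y})=4\mu^{2}$, so the chi-squared constraints become the half-space conditions $|\lambda|\ge\sqrt{\alpha}/2$ and $|\mu|\ge\sqrt{\beta}/2$ (or the reverse inequalities in the $\sup$ case). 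Next I would invoke Lemma~\ref{lem:min_chi_square} to obtain the closed form for the inner coupling infimum, and rearrange it:
\begin{align}
\hat{\rvX}^{2}(\eta_{X},\eta_{Y}\|\pi_{XY})
&=\frac{2(\lambda+\mu)^{2}}{1+\rho}+\frac{2(\lambda-\mu)^{2}}{1-\rho}\\
&=\frac{4(\lambda^{2}+\mu^{2}-2\rho\lambda\mu)}{1-\rho^{2}}.
\end{align}
Thus both problems reduce to optimizing the simple quadratic $Q(\lambda,\mu):=\lambda^{2}+\mu^{2}-2\rho\lambda\mu$ over a rectangular-exterior or rectangular region.

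For $\overline{\Upsilon}_{\mathrm{MD}}(\alpha,\beta)$, the feasible set is the box $|\lambda|\le\sqrt{\alpha}/2$, $|\mu|\le\sqrt{\beta}/2$. To maximize $Q$ one should make $-2\rho\lambda\mu$ as large as possible, so $\mathrm{sgn}(\lambda\mu)=-\mathrm{sgn}(\rho)$; once signs are fixed, each partial derivative $\partial_{\lambda}Q=2\lambda-2\rho\mu$ and $\partial_{\mu}Q=2\mu-2\rho\lambda$ has constant sign on the chosen quadrant, so the maximum is attained at the extreme corner $|\lambda|=\sqrt{\alpha}/2$, $|\mu|=\sqrt{\beta}/2$, $\mathrm{sgn}(\lambda\mu)=-\mathrm{sgn}(\rho)$, yielding $Q=\tfrac{1}{4}(\alpha+\beta+2\rho\sqrt{\alpha\beta})$ and hence \eqref{eq:NICD-MD2}.

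For $\underline{\Upsilon}_{\mathrm{MD}}(\alpha,\beta)$, the feasible set is the exterior $|\lambda|\ge\sqrt{\alpha}/2$, $|\mu|\ge\sqrt{\beta}/2$; now $Q$ is minimized by aligning signs so that $\mathrm{sgn}(\lambda\mu)=\mathrm{sgn}(\rho)$, and the unconstrained stationarity $\lambda=\rho\mu$, $\mu=\rho\lambda$ forces $\lambda=\mu=0$, which is infeasible. Hence the minimum lies on the boundary $\{|\lambda|=\sqrt{\alpha}/2\}\cup\{|\mu|=\sqrt{\beta}/2\}$. On each edge the restricted problem is a scalar quadratic, and the one-dimensional stationary point $\mu=\rho\lambda$ (or $\lambda=\rho\mu$) lies in the feasible region only when $\rho^{2}\alpha\ge\beta$ (respectively $\rho^{2}\beta\ge\alpha$); otherwise the minimum on that edge is at the rectangle corner $|\lambda|=\sqrt{\alpha}/2$, $|\mu|=\sqrt{\beta}/2$. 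Evaluating each candidate gives the three possibilities $Q=\tfrac{\alpha(1-\rho^{2})}{4}$, $Q=\tfrac{\beta(1-\rho^{2})}{4}$, and $Q=\tfrac{1}{4}(\alpha+\beta-2\rho\sqrt{\alpha\beta})$; a one-line algebra check, using $(\rho\sqrt{\alpha}-\sqrt{\beta})^{2}\ge0$, verifies that the interior boundary critical point beats the corner whenever it is feasible, so the three cases of \eqref{eq:NICD-MD} correspond exactly to $\{\beta<\rho^{2}\alpha\}$, $\{\alpha<\rho^{2}\beta\}$, and the intermediate regime $\rho^{2}\alpha\le\beta\le\alpha/\rho^{2}$.

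The only non-routine step is the case split in the minimization: the main obstacle is to confirm that, outside the intermediate regime, the candidate value from the tangential stationary point is strictly smaller than the rectangular corner value, which is exactly what the identity $(\rho\sqrt{\alpha}-\sqrt{\beta})^{2}\ge0$ delivers. Everything else is elementary calculus on a quadratic in two variables, and the result holds for $\rho\in(-1,1)$ with the same expressions since $Q$ depends on $\rho$ only through $\rho\lambda\mu$, the sign of which we are free to choose via $\mathrm{sgn}(\lambda\mu)$.
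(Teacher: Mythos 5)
Your proposal is correct and is essentially the proof the paper gives: parametrize $\eta_X=(-\lambda,\lambda)$, $\eta_Y=(-\mu,\mu)$, reduce the constraints to $4\lambda^2\gtrless\alpha$, $4\mu^2\gtrless\beta$, plug in Lemma~\ref{lem:min_chi_square}, restrict signs (the paper invokes the rearrangement inequality and symmetry; you argue directly from the sign of the cross term $-2\rho\lambda\mu$), and finish by elementary calculus on the resulting bivariate quadratic. The only difference is that you spell out the edge/corner case analysis that the paper compresses into ``by calculus, one can verify,'' and you rewrite the objective as $\frac{4(\lambda^2+\mu^2-2\rho\lambda\mu)}{1-\rho^2}$ rather than keeping the $\frac{2(\lambda+\mu)^2}{1+\rho}+\frac{2(\lambda-\mu)^2}{1-\rho}$ form, which is a cosmetic change.
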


\begin{proof} Observe by the uniformity of $\pi_{X}$ and $\pi_{Y}$
that $\hat{\chi}^{2}(\eta_{X}\|\pi_{X})=4\lambda^{2}$ and $\hat{\chi}^{2}(\eta_{Y}\|\pi_{Y})=4\mu^{2}$. Combining these with Lemma \ref{lem:min_chi_square} yields that  
\begin{align}
\underline{\Upsilon}_{\mathrm{MD}}(\alpha,\beta) & =\min_{\lambda,\mu:4\lambda^{2}\ge\alpha,4\mu^{2}\ge\beta}\frac{2(\lambda+\mu)^{2}}{1+\rho}+\frac{2(\lambda-\mu)^{2}}{1-\rho}\quad\mbox{and}\\
\overline{\Upsilon}_{\mathrm{MD}}(\alpha,\beta) & =\max_{\lambda,\mu:4\lambda^{2}\le\alpha,4\mu^{2}\le\beta}\frac{2(\lambda+\mu)^{2}}{1+\rho}+\frac{2(\lambda-\mu)^{2}}{1-\rho}.
\end{align}
By the rearrangement inequality and by symmetry, it suffices to consider
$\lambda,\mu\ge0$ for $\underline{\Upsilon}_{\mathrm{MD}}(\alpha,\beta)$
and $\lambda\le0\le\mu$ for $\overline{\Upsilon}_{\mathrm{MD}}(\alpha,\beta)$.
This results in 
\begin{align}
\underline{\Upsilon}_{\mathrm{MD}}(\alpha,\beta) & =\min_{\lambda\ge\frac{\sqrt{\alpha}}{2},\mu\ge\frac{\sqrt{\beta}}{2}}\frac{2(\lambda+\mu)^{2}}{1+\rho}+\frac{2(\lambda-\mu)^{2}}{1-\rho}\quad\mbox{and}\label{eqn:calculus1}\\
\overline{\Upsilon}_{\mathrm{MD}}(\alpha,\beta) & =\max_{-\frac{\sqrt{\alpha}}{2}\le\lambda\le0\le\mu\le\frac{\sqrt{\beta}}{2}}\frac{2(\lambda+\mu)^{2}}{1+\rho}+\frac{2(\lambda-\mu)^{2}}{1-\rho}.\label{eqn:calculus2}
\end{align}
By calculus, one can verify that the right-hand sides of~\eqref{eqn:calculus1}
and~\eqref{eqn:calculus2} are respectively equal to the right-hand sides
of~\eqref{eq:NICD-MD} and~\eqref{eq:NICD-MD2}. \end{proof}

We conclude this section by discussing the relationships between the MD and CL 
exponents as well as the MD and LD exponents. 
We can recover the MD exponents from the CL or LD exponents  if the  MD sequence $\{\theta_{n}\}$ additionally satisfies $(\log n)/\theta_{n}\to0$
as $n\to\infty$. Roughly speaking, in the MD regime, we chose the
radii $r_{n}$ and $s_{n}$ of Hamming spheres such that $\frac{r_{n}}{n}\approx\frac{1}{2}+\lambda\sqrt{\epsilon}$
and $\frac{s_{n}}{n}\approx\frac{1}{2}+\mu\sqrt{\epsilon}$, where
$\epsilon:=\frac{\theta_{n}}{n}\to0$ as $n\to\infty$. This implies
that the types corresponding to the spheres are $Q_{X}\approx \pi_{X}+\sqrt{\epsilon} \, \eta_{X}$
and $Q_{Y}\approx \pi_{Y}+\sqrt{\epsilon} \, \eta_{Y}$ as $\epsilon\downarrow0$.
 Note that in Sanov's theorem, the LD exponent of the probability
of a Hamming sphere with type $Q_{X}$ is $D(Q_{X}\|\pi_{X})+O\big(\frac{\log n}{n}\big)$.
Hence, if the MD sequence  $\{\theta_n\}$ additionally satisfies  $(\log n)/\theta_{n}\to0$ as $n\to\infty$,
this LD exponent is dominated by the term $D(Q_{X}\|\pi_{X})$, which
allows us to omit the $O\big(\frac{\log n}{n}\big)$ term. Moreover, by Taylor's theorem,
\begin{align}
D(Q_{X}\|\pi_{X}) & =\frac{\epsilon}{2}\hat{\chi}^{2}(\eta_{X}\|\pi_{X})+o(\epsilon),\\
D(Q_{Y}\|\pi_{Y}) & =\frac{\epsilon}{2}\hat{\chi}^{2}(\eta_{Y}\|\pi_{Y})+o(\epsilon),
\end{align}
and similarly, 
\begin{align}
\rvD(Q_{X},Q_{Y}\|\pi_{XY}) & =\frac{\epsilon}{2}\hat{\rvX}^{2}(\eta_{X},\eta_{Y}\|\pi_{XY})+o(\epsilon)\quad\mbox{as }\epsilon\downarrow0.
\end{align}
We obtain the MD exponents by replacing $D$ and $\rvD$ in the LD
exponents with $\frac{\epsilon}{2}\hat{\chi}^{2}$ and $\frac{\epsilon}{2}\hat{\rvX}^{2}$
respectively. Formally, 
\begin{align}
\lim_{\epsilon\downarrow0}\frac{1}{\epsilon}\underline{\Upsilon}_{\mathrm{LD}}(\epsilon\alpha,\epsilon\beta) & =\underline{\Upsilon}_{\mathrm{MD}}(\alpha,\beta)\quad\mbox{and}\label{eq:NICD-13a}\\
\lim_{\epsilon\downarrow0}\frac{1}{\epsilon}\overline{\Upsilon}_{\mathrm{LD}}(\epsilon\alpha,\epsilon\beta) & =\overline{\Upsilon}_{\mathrm{MD}}(\alpha,\beta).\label{eq:NICD-13}
\end{align}
Furthermore, the MD exponents can be also recovered from the CL exponents.
By the Berry--Esseen theorem \cite{Berry41,Esseen42}, under  the  condition that the MD sequence $\{\theta_n\}$ satisfies $(\log n)/\theta_{n}\to0$
as $n\to\infty$, the probability of a Hamming ball is dominated by
the term involving the Gaussian cumulative distribution function $\Phi(\cdot)$. In other words, the additive error term in the Berry--Esseen theorem, which scales as
$O(\frac{1}{\sqrt{n}})$, is negligible asymptotically. 
On the other hand, \citet[Ex.~9.24 and~10.5]{ODonnell14analysisof}
shows that  
\begin{align}
\lim_{\theta\to\infty}\frac{1}{\theta}\underline{\Upsilon}_{\mathrm{CL}}(\theta\alpha,\theta\beta) & =\underline{\Upsilon}_{\mathrm{MD}}(\alpha,\beta)\quad\mbox{and}\label{eq:NICD-14a}\\
\lim_{\theta\to\infty}\frac{1}{\theta}\overline{\Upsilon}_{\mathrm{CL}}(\theta\alpha,\theta\beta) & =\overline{\Upsilon}_{\mathrm{MD}}(\alpha,\beta),\label{eq:NICD-14}
\end{align}
where $\underline{\Upsilon}_{\mathrm{CL}}$ and $\overline{\Upsilon}_{\mathrm{CL}}$
are defined in \eqref{eqn:Theta_under} and \eqref{eqn:Theta_over}
respectively. 

\begin{figure}
\centering %
\begin{tabular}{cc}
\hspace{-0.2in}\includegraphics[width=0.51\columnwidth]{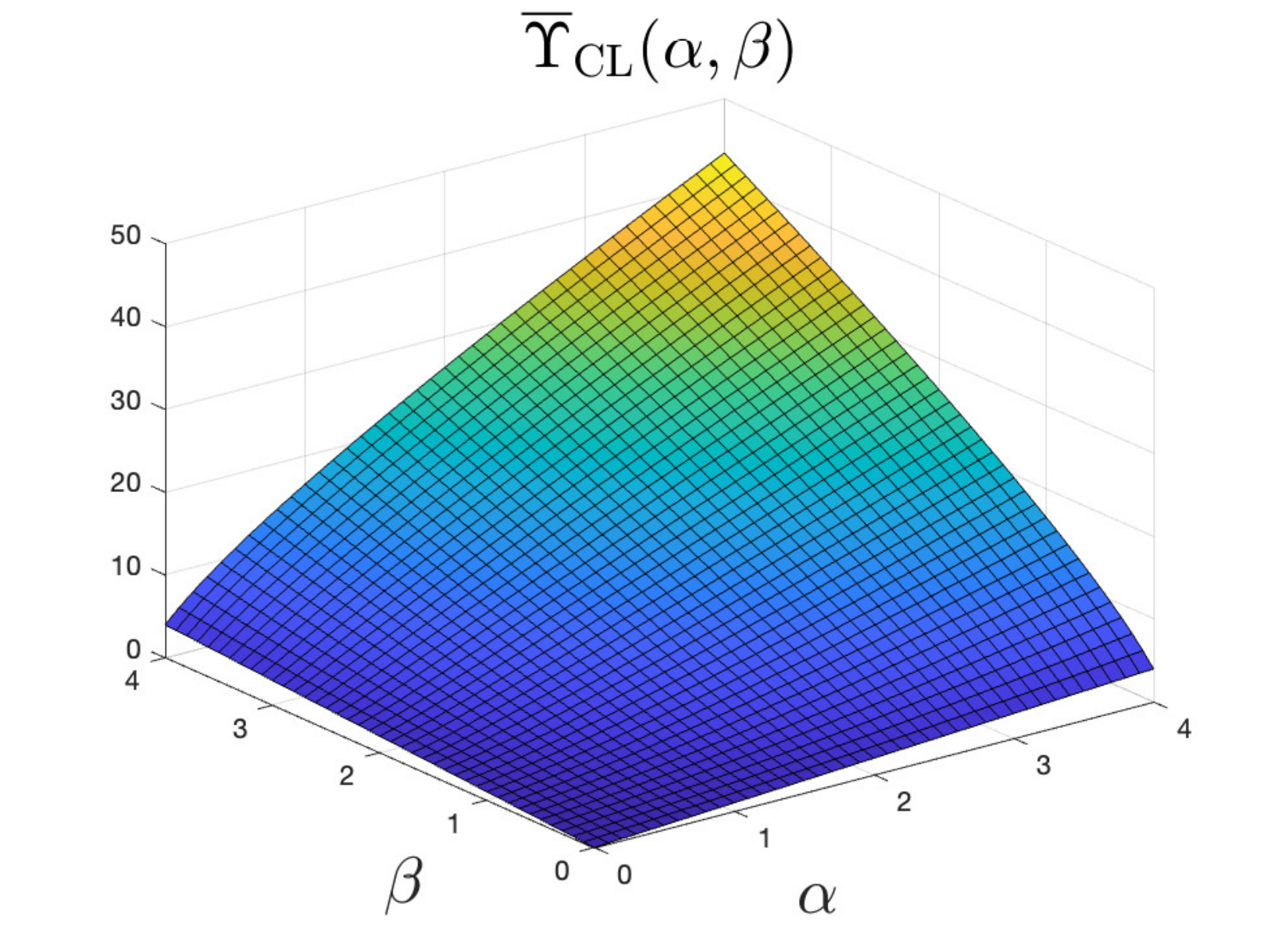}  & \hspace{-0.32in} \includegraphics[width=0.51\columnwidth]{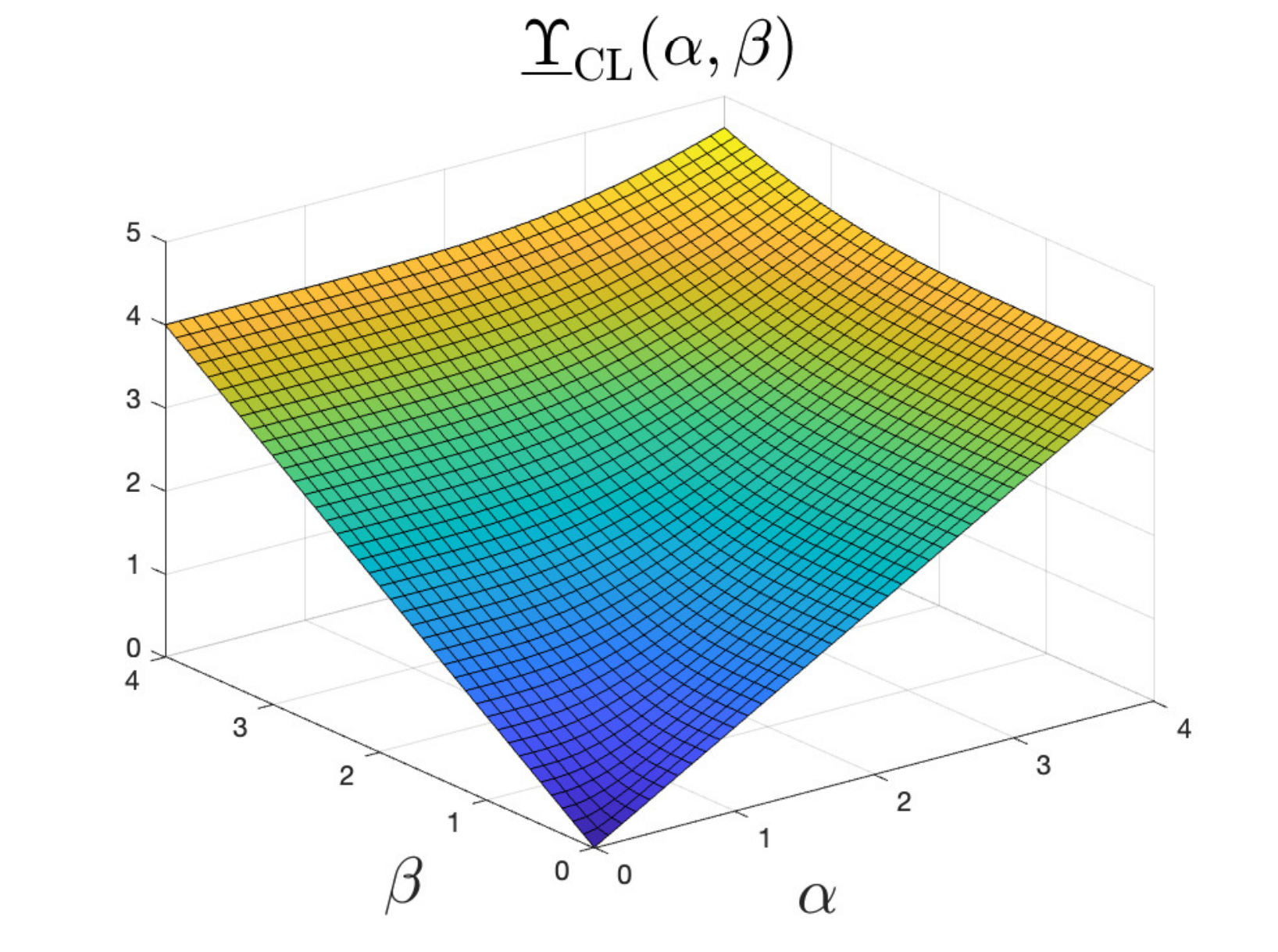} \tabularnewline
\hspace{-0.2in}\includegraphics[width=0.51\columnwidth]{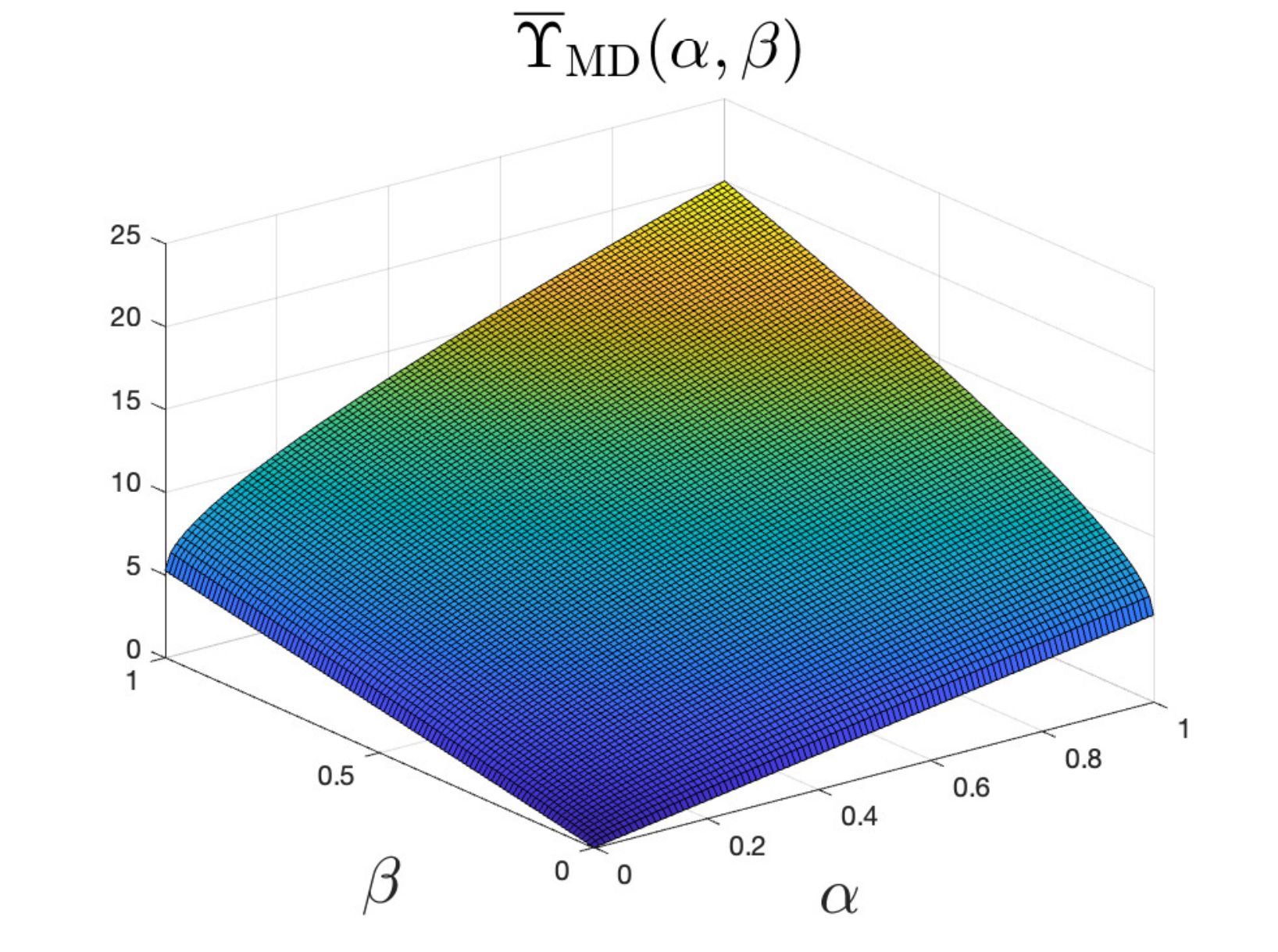}  & \hspace{-0.32in} \includegraphics[width=0.51\columnwidth]{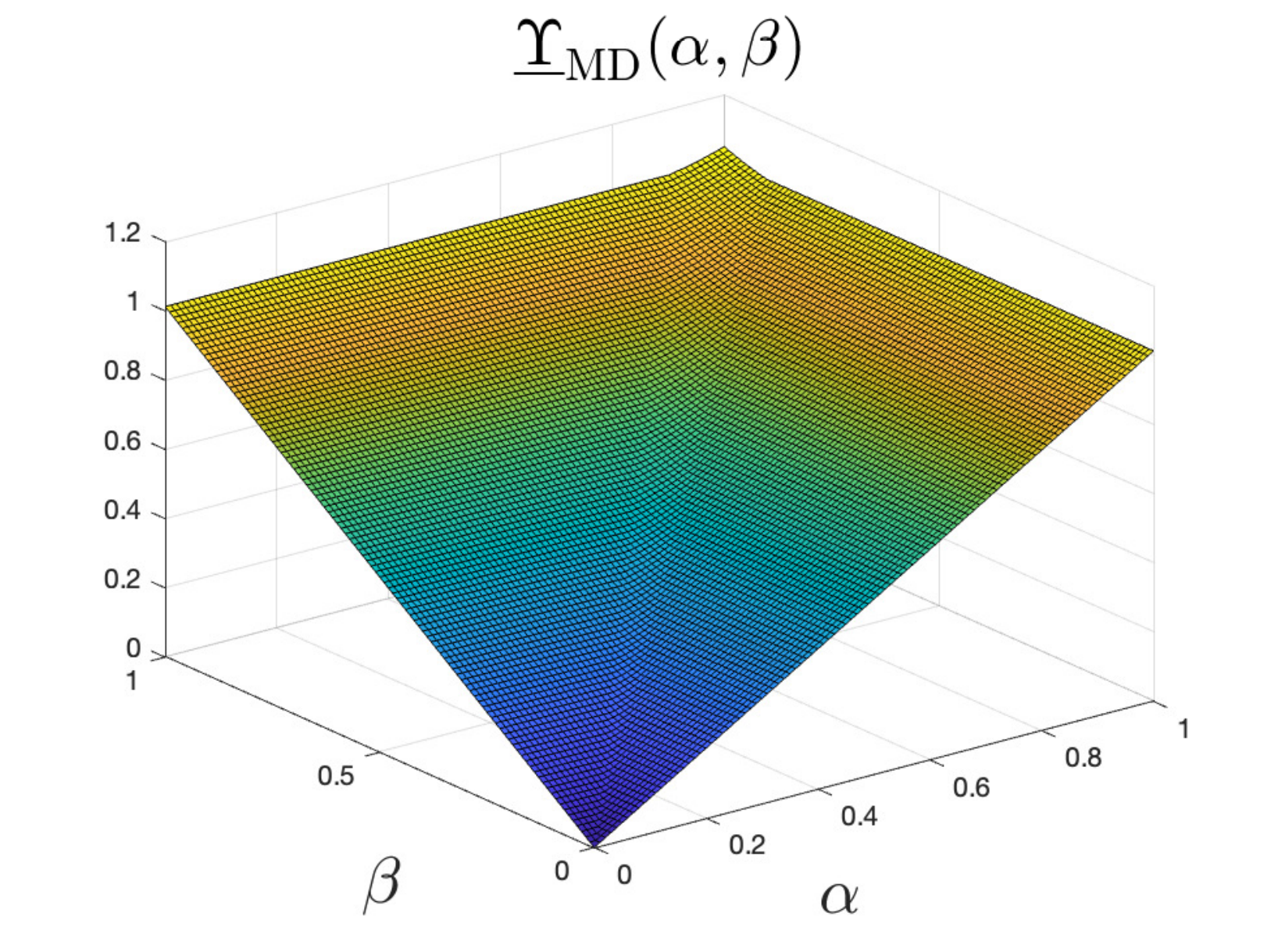} \tabularnewline
\hspace{-0.2in}\includegraphics[width=0.51\columnwidth]{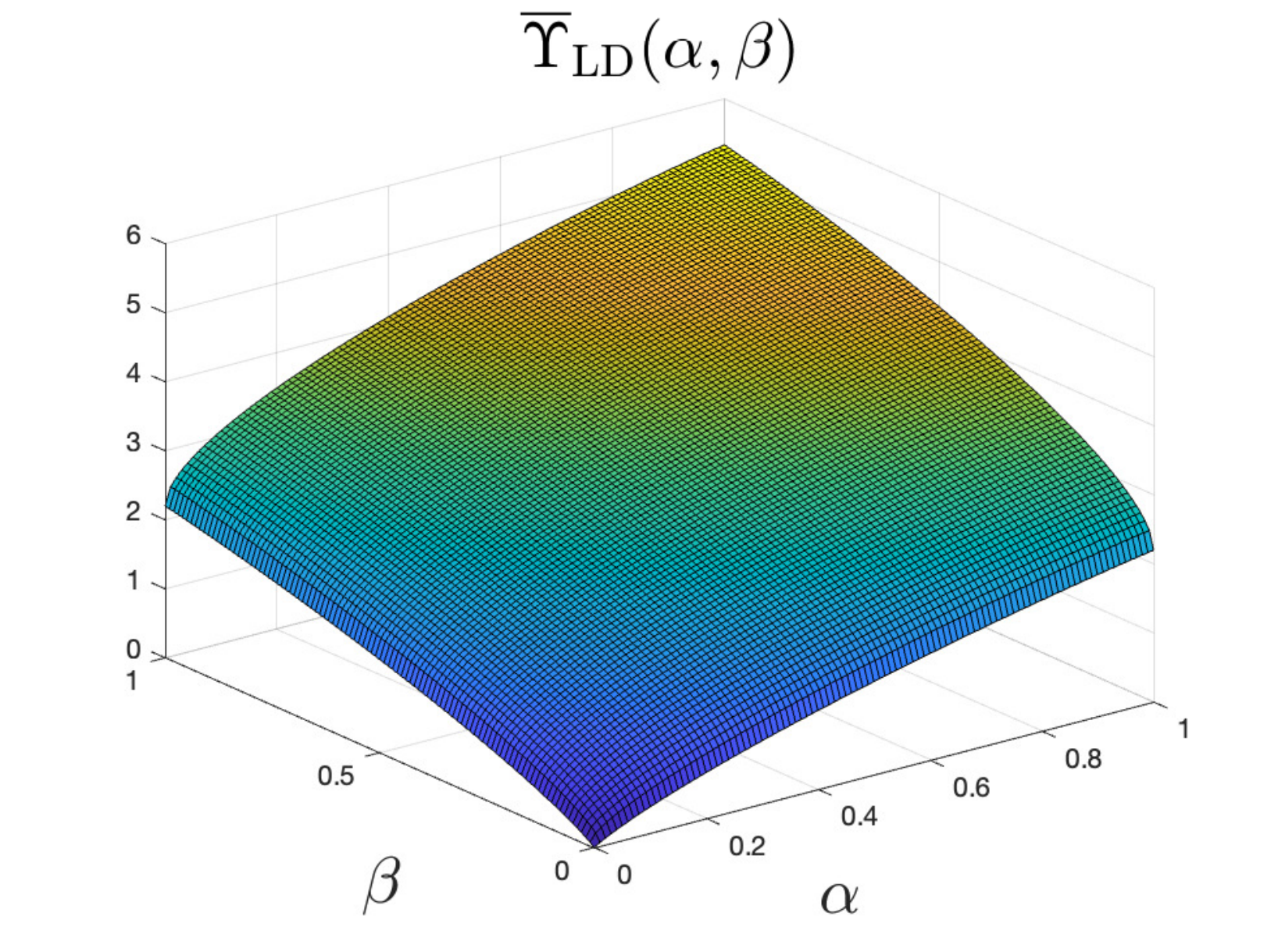}  & \hspace{-0.32in} \includegraphics[width=0.51\columnwidth]{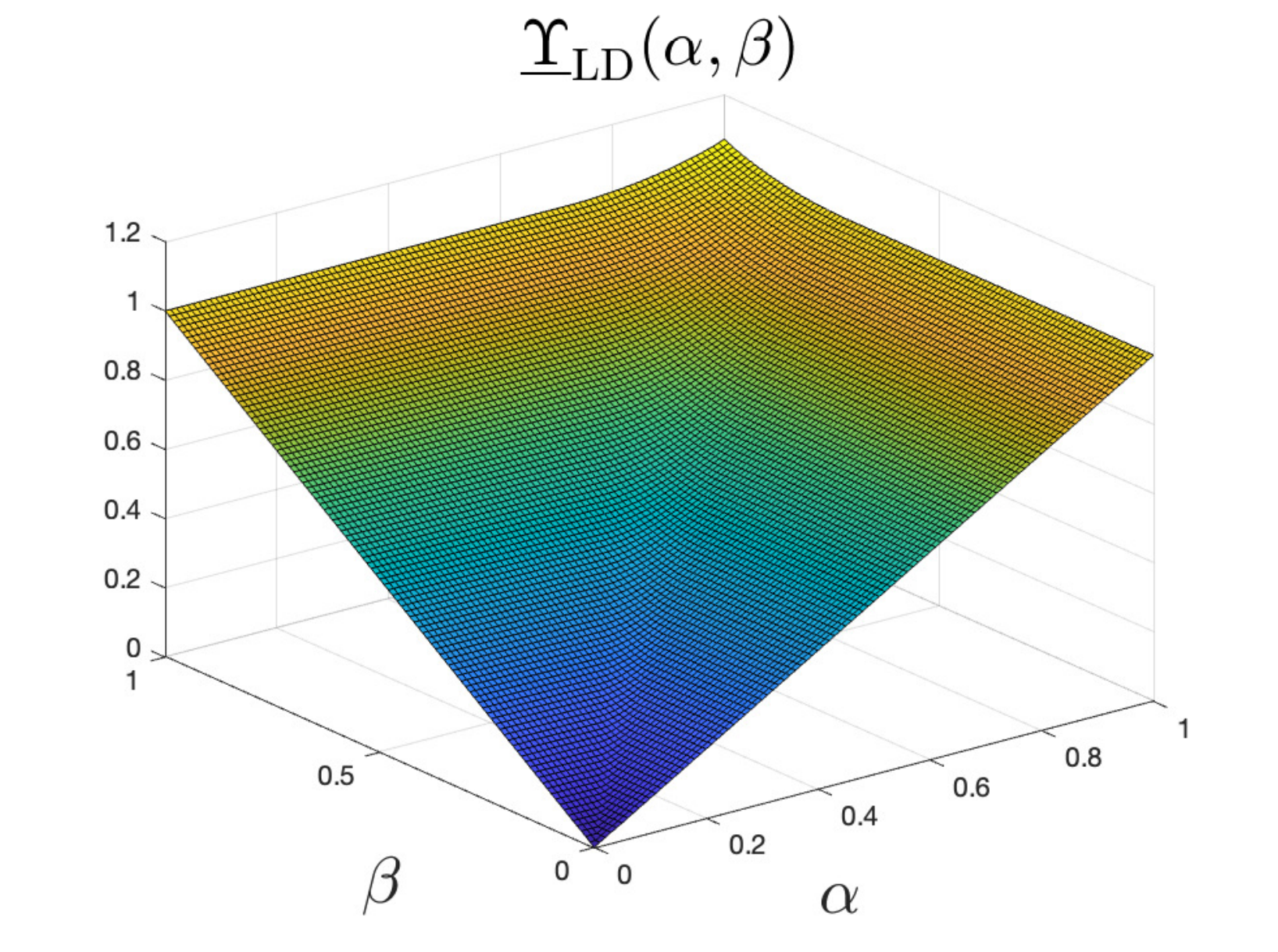}
\tabularnewline
\end{tabular}\caption{Forward and reverse CL, MD, and LD exponents induced by Hamming balls
(or spheres) for $\rho=0.9$. Observe that $\underline{\Upsilon}_{\mathrm{MD}}$
and $\underline{\Upsilon}_{\mathrm{LD}}$ appear to be convex while~$\overline{\Upsilon}_{\mathrm{MD}}$
and $\overline{\Upsilon}_{\mathrm{LD}}$ appear to be concave. The convexity
and concavity   of~$\underline{\Upsilon}_{\mathrm{LD}}$ and
$\overline{\Upsilon}_{\mathrm{LD}}$ respectively have implications
for the OPS conjecture (Conjecture~\ref{conj:ordentlich2020note})
whose resolution is provided in Section~\ref{sec:ldr}.}
\label{fig:FExponents} 
\end{figure}

\subsection{Numerical Results and Comparisons}

We now evaluate the various exponents for the DSBS with correlation
coefficient~$\rho$. Define $\kappa:=(\frac{1+\rho}{1-\rho})^{2}$,
\begin{align}
D_{a,b}(p) & :=D\left(\begin{bmatrix}1+p-a-b & b-p\\
a-p & p
\end{bmatrix}~\middle\|~\pi_{XY}\right)\quad\mbox{and}\\
\rvD(a,b) & :=\min_{\max\{0,a+b-1\}\le p\le\min\{a,b\}}D_{a,b}(p)=D_{a,b}(p_{a,b}^{*}),
\end{align}
where $h(\cdot)$ is the binary entropy function, and 
\begin{equation}
p_{a,b}^{*}:=\frac{(\kappa - 1)(a+b)+1-\sqrt{((\kappa-1)(a  +b)+1)^{2}- 4\kappa(\kappa- 1)ab}}{2(\kappa-1)}.
\end{equation}
For the DSBS, $\underline{\Upsilon}_{\mathrm{LD}}$ and $\overline{\Upsilon}_{\mathrm{LD}}$,
defined in~\eqref{eqn:LDsphere1} and \eqref{eqn:LDsphere2}, respectively
can be written in closed form as 
\begin{align}
\underline{\Upsilon}_{\mathrm{LD}}(\alpha,\beta) & =\rvD \big(h^{-1}(1-\alpha),h^{-1}(1-\beta)\big)\quad\mbox{and}\\
\overline{\Upsilon}_{\mathrm{LD}}(\alpha,\beta) & =\rvD\big(h^{-1}(1-\alpha),1-h^{-1}(1-\beta)\big),
\end{align}
where $h^{-1}:[0,1]\to[0,1/2]$ is the inverse of the binary entropy
function~$h$ when its domain is restricted to $[0,1/2]$.

We plot the CL exponents achieved by Hamming balls, and the MD and
LD exponents achieved by Hamming balls, spheres,  or spherical shells in Fig.~\ref{fig:FExponents}.
By the homogeneity property in~\eqref{eq:NICD-6} and~\eqref{eq:NICD-12},
the surfaces corresponding to $\underline{\Upsilon}_{\mathrm{MD}}$
and $\overline{\Upsilon}_{\mathrm{MD}}$ are formed by an infinite number
of half-lines from the origin to   infinity. Furthermore, by the relation
between the MD, LD and CL exponents in~\eqref{eq:NICD-13a}--\eqref{eq:NICD-13}
and~\eqref{eq:NICD-14a}--\eqref{eq:NICD-14}, the surfaces of
$\underline{\Upsilon}_{\mathrm{MD}}$ and $\overline{\Upsilon}_{\mathrm{MD}}$
can be recovered from the surfaces of $\underline{\Upsilon}_{\mathrm{CL}}$
and $\overline{\Upsilon}_{\mathrm{CL}}$ by zooming them out, or recovered
from $\underline{\Upsilon}_{\mathrm{LD}}$ and $\overline{\Upsilon}_{\mathrm{LD}}$
by zooming into a neighborhood of the origin. However, the surfaces
of $\underline{\Upsilon}_{\mathrm{CL}}$ and $\overline{\Upsilon}_{\mathrm{CL}}$
as well as the surfaces of $\underline{\Upsilon}_{\mathrm{LD}}$ and
$\overline{\Upsilon}_{\mathrm{LD}}$ {\em cannot} be recovered from
those of $\underline{\Upsilon}_{\mathrm{MD}}$ and $\overline{\Upsilon}_{\mathrm{MD}}$.
In other words, $\underline{\Upsilon}_{\mathrm{MD}}$ and $\overline{\Upsilon}_{\mathrm{MD}}$
contain much less information compared to $\underline{\Upsilon}_{\mathrm{CL}}$
and $\overline{\Upsilon}_{\mathrm{CL}}$ as well as $\underline{\Upsilon}_{\mathrm{LD}}$
and $\overline{\Upsilon}_{\mathrm{LD}}$. This is not unexpected as
the MD regime can be thought of a limiting case of the LD and CL regimes.
Numerical results in Fig.~\ref{fig:FExponents} suggest that~$\underline{\Upsilon}_{\mathrm{MD}}$
and $\underline{\Upsilon}_{\mathrm{LD}}$ are convex, and $\overline{\Upsilon}_{\mathrm{MD}}$
and $\overline{\Upsilon}_{\mathrm{LD}}$ are concave, but $\underline{\Upsilon}_{\mathrm{CL}}$
and~$\overline{\Upsilon}_{\mathrm{CL}}$ are neither convex nor concave.
In  Section~\ref{sec:ldr}, we discuss
these issues rigorously in the context of the OPS conjecture (Conjecture~\ref{conj:ordentlich2020note}).

\begin{table}[t] 
\caption{Comparison of subcubes and Hamming balls or, equivalently, spheres}
\label{tab:Comparison-of-subcubes} 
 \centering{\small
\begin{centering}
\begin{tabular}{|>{\centering}p{2.0cm}|>{\centering}p{1.5cm}|>{\centering}p{1.5cm}|>{\centering}p{1.9cm}|>{\centering}p{1.9cm}|}
\hline 
Regimes  & \multicolumn{2}{c|}{Central limit} & Moderate deviations  & Large deviations\tabularnewline
\hline 
$a,b$  & Fixed and large  & Fixed and small  & Subexp.\ vanishing  & Exp.\ vanishing \tabularnewline
\hline 
\hline 
Subcubes  & Better  & Worse  & Worse  & Worse \tabularnewline
\hline 
Balls/Spheres  & Worse  & Better  & Better  & Better\tabularnewline
\hline 
\end{tabular}
\end{centering}}
\end{table}

\begin{figure}[t]
\centering 
\begin{overpic}[width=0.95\columnwidth]{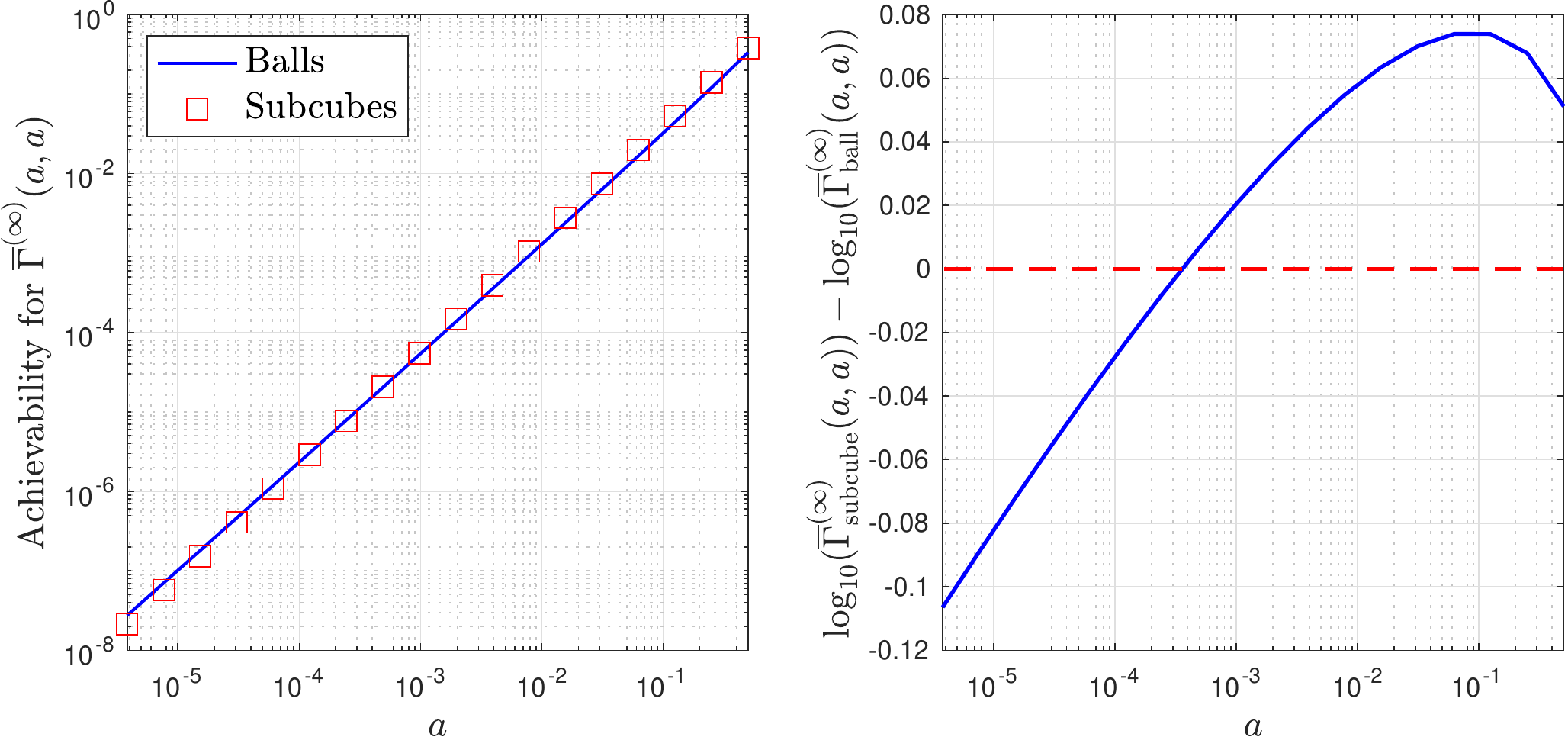} 
{\footnotesize
\put(86,38){Subcubes}
\put(86,35){better}
\put(63,10){Balls better}
}
\end{overpic}
\caption{\label{fig:comparison}Left: The forward joint probabilities achieved
by subcubes and Hamming balls with $a=b$ and $\rho=0.5$; Right:
The difference between the logarithms of the forward joint probabilities
achieved by subcubes and Hamming balls which shows that subcubes outperform
balls for large $a$ and vice versa.}
\end{figure}

\enlargethispage{\baselineskip}
We now compare the performances of subcubes, Hamming balls, and Hamming
spheres (or spherical shells). We illustrate the forward joint probabilities achieved by
subcubes and Hamming balls in Fig.~\ref{fig:comparison}. As the
gaps between the probabilities are visually imperceptible, we also illustrate their differences on the right plot of Fig.~\ref{fig:comparison}. Based on the numerical
comparisons, we observe that for large $a$ and~$b$, subcubes are
better. However, for small~$a$ and~$b$, Hamming balls are better.
We summarize the performances of various geometric structures under
different asymptotic regimes in Table~\ref{tab:Comparison-of-subcubes}.
Based on these results, it is natural to ask whether subcubes are
{\em optimal} for large $a$ and $b$, and whether Hamming balls or spheres
are {\em optimal} for small $a$ and $b$. In the following  sections, we  
provide answers to these questions.

\section{Converses in the Central Limit Regime}
\label{sec:nicd_conv}

In this and the next two sections, we discuss the optimality of subcubes, Hamming
balls, and spheres (or spherical shells) in the various asymptotic regimes for the forward
and reverse joint probabilities. In this section, we consider the CL regime in which we
are interested in determining whether subcubes are optimal in for the 
NICD problem for $a=b\in\{{1}/{2},{1}/{4}\}$. The case $a=b=1/2$ is relatively
well known and solved by \citet{witsenhausen1975sequences}. The case
$a=b={1}/{4}$, however, is more challenging and, in fact, was posed
as an open problem by   E.~Mossel in 2017 \cite{mossel2017}; see also \citet[Problem~2.6]{mossel2021probabilistic}. Here, we term the case $a=b={1}/{4}$, 
as the \emph{mean-$1/4$ stability problem}. In the CL regime, it
is also natural to ask whether Hamming balls are optimal for small 
but fixed $a$ and $b$ (i.e., $0<a,b<1/4$). Since this case behaves similarly to that in  the MD regime, we will discuss it in the next section concerning the MD regime. 

\subsection{Case of $a=b={1}/{2}$:  Maximal Correlation Method} 
\label{sec:clt_half}

We first consider the optimality of subcubes (or Boolean functions)
for the case $a=b={1}/{2}$ in the NICD problem. By using the properties
of the {\em maximal correlation}, the non-asymptotic optimality
of subcubes for this basic case was confirmed positively by \citet{witsenhausen1975sequences}.
We recall from \eqref{eq:mc} in  the introduction that the {\em Hirschfeld--Gebelein--R\'enyi} (or {\em
HGR}) {\em maximal correlation} \cite{hirschfeld1935connection,gebelein1941statistische,renyi1959measures}
between two random variables $X$ and $Y$ is defined as 
\begin{align}
\rho_{\mathrm{m}}(X;Y):=\sup_{f,g}\rho(f(X);g(Y)),
\end{align}
where $\rho(U;V)$ denotes the correlation coefficient between $U$ and~$V$ (defined in~\eqref{eqn:pcc}), and
the supremum is taken over all real-valued functions $f$ and $g$
such that $0<\var(f(X)),\var(g(Y))<\infty$. It is well-known that
the maximal correlation satisfies several desirable properties, including
tensorization and the data processing inequality. 
\begin{enumerate}
\item \underline{Tensorization}: For a sequence of independent pairs of
random variables $(X^{n},Y^{n})=\{(X_{i},Y_{i})\}_{i=1}^{n}$, we
have 
\begin{align}
\rho_{\mathrm{m}}(X^{n};Y^{n})=\max_{ i \in [n]}\rho_{\mathrm{m}}(X_{i};Y_{i}).\label{eqn:tensor_prop}
\end{align}
\item \underline{Data processing inequality (DPI)}: For the Markov chain
$U-X-Y-V$, we have 
\begin{align}
\rho_{\mathrm{m}}(U;V)\leq\rho_{\mathrm{m}}(X;Y).\label{eqn:dpi_prop}
\end{align}
\item \label{enu:Binary-Cases:-For} \underline{Binary random variables}:
For binary $X$ and $Y$, we have 
\begin{equation}
\rho_{\mathrm{m}}(X;Y)=\left|\rho(X;Y)\right|.\label{eq:NICD-16}
\end{equation}
\end{enumerate}
Using these properties, \citet{witsenhausen1975sequences} proved
the following theorem. \begin{theorem} \label{thm:wit} Let $\pi_{XY}$
be the doubly symmetric binary distribution with correlation coefficient
$\rho$ as defined in~\eqref{eq:NICDDSBS}. For any $\calA$ and $\calB$
with $\pi_{X}^{n}(\calA)=a$ and $\pi_{Y}^{n}(\calB)=b$, 
\begin{align}
ab-\rho\sqrt{a\bar{a}b\bar{b}}\le \pi_{XY}^{n}(\calA\times\calB)\le ab+\rho\sqrt{a\bar{a}b\bar{b}}.\label{eqn:max_cor}
\end{align}
\end{theorem}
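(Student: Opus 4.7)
The plan is to apply the three listed properties of the HGR maximal correlation to the indicators $f = \mathbf{1}_\calA$ and $g = \mathbf{1}_\calB$, which is precisely why the statement above was set up as a preamble to this theorem. The key observation is that the quantity $\pi_{XY}^n(\calA\times\calB) - ab$ is (up to normalization) exactly the Pearson correlation between the Bernoulli random variables $U := \mathbf{1}_\calA(X^n)$ and $V := \mathbf{1}_\calB(Y^n)$. Thus bounding $|\rho(U;V)|$ by $\rho$ will yield both inequalities simultaneously.

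More concretely, I would first note that $U \sim \mathrm{Bern}(a)$ and $V \sim \mathrm{Bern}(b)$, so $\mathrm{Var}(U) = a\bar{a}$ and $\mathrm{Var}(V) = b\bar{b}$, and
\begin{equation}
\mathrm{Cov}(U,V) = \bbE[UV] - \bbE[U]\bbE[V] = \pi_{XY}^n(\calA\times\calB) - ab,
\end{equation}
whence
\begin{equation}
\rho(U;V) = \frac{\pi_{XY}^n(\calA\times\calB) - ab}{\sqrt{a\bar{a}\, b\bar{b}}}.
\end{equation}
Since $U$ and $V$ are particular functions of $X^n$ and $Y^n$, we have the Markov chain $U - X^n - Y^n - V$, so the DPI for maximal correlation (property~\eqref{eqn:dpi_prop}) gives $|\rho(U;V)| \le \rho_{\mathrm{m}}(U;V) \le \rho_{\mathrm{m}}(X^n;Y^n)$. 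The tensorization property~\eqref{eqn:tensor_prop} applied to the i.i.d.\ pairs $\{(X_i,Y_i)\}_{i=1}^n$ then yields $\rho_{\mathrm{m}}(X^n;Y^n) = \rho_{\mathrm{m}}(X;Y)$, and finally, since $X,Y$ are binary, property~\eqref{eq:NICD-16} gives $\rho_{\mathrm{m}}(X;Y) = |\rho(X;Y)| = \rho$ (using $\rho \ge 0$ by assumption).

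Chaining these together, we obtain $|\rho(U;V)| \le \rho$, i.e.,
\begin{equation}
\bigl|\pi_{XY}^n(\calA\times\calB) - ab\bigr| \le \rho\sqrt{a\bar{a}\,b\bar{b}},
\end{equation}
which is exactly~\eqref{eqn:max_cor}. There is no real obstacle here: every nontrivial step is handed to us by the three listed properties. The only genuinely nontrivial ingredient is the proof of tensorization of maximal correlation (due to Witsenhausen), but that is invoked as a black box. A small sanity check worth making at the end is that assuming $\rho \ge 0$ loses no generality, since replacing $X$ by $1-X$ flips the sign of $\rho$ but permutes the role of $\calA$ and $\calX^n \setminus \calA$, leaving the stated inequalities invariant.
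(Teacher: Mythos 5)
Your proof is correct and follows essentially the same route as the paper's: apply the tensorization, DPI, and binary-case properties of the maximal correlation to the indicators $\bone_\calA$ and $\bone_\calB$ via the Markov chain $U - X^n - Y^n - V$. The only cosmetic difference is that you use the generic inequality $|\rho(U;V)| \le \rho_{\mathrm{m}}(U;V)$ where the paper invokes the equality from the binary case, but this is immaterial to the argument.
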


\begin{proof} Let $(X^{n},Y^{n})\sim \pi_{XY}^{n}$. Define $U:=\bone_{\calA}(X^{n})$
and $V:=\bone_{\calB}(Y^{n})$. Then we have the Markov chain $U-X^{n}-Y^{n}-V$.
Consider, 
\begin{align}
\frac{\left|\pi_{XY}^{n}(\calA\times\calB)-ab\right|}{\sqrt{a\bar{a}}\sqrt{b\bar{b}}} & =|\rho(U;V) |\nn\\*
 & =\rho_{\mathrm{m}}(U;V)\label{eqn:use_bin1}\\
 & \le\rho_{\mathrm{m}}(X^{n};Y^{n})\label{eqn:use_DPI}\\
 & =\rho_{\mathrm{m}}(X_1;Y_1)\label{eqn:use_tensor}\\
 & =\rho,\label{eqn:use_bin2}
\end{align}
where \eqref{eqn:use_bin1} and \eqref{eqn:use_bin2} follow from~\eqref{eq:NICD-16},
\eqref{eqn:use_DPI} follows from the data processing inequality in~\eqref{eqn:dpi_prop}, 
and \eqref{eqn:use_tensor} follows from the tensorization property
in \eqref{eqn:tensor_prop} (since all pairs of random variables are
{\em identically distributed}, the max in \eqref{eqn:tensor_prop}
is simply $\rho_{\rmm}(X_{1};Y_{1})$). \end{proof}
From  Theorem~\ref{thm:wit}, one  deduces
that for $a=b=1/2$,  
\begin{equation}
\frac{1-\rho}{4}\le \pi_{XY}^{n}(\calA\times\calB)\le\frac{1+\rho}{4}. \label{eqn:mc_half}
\end{equation}
Based on the discussion around~\eqref{eqn:subcube_id}--\eqref{eqn:subcube_anti},
the upper bound is achieved by a pair of identical dictator functions,
i.e., $f(x^{n})=g(x^{n})=x_{i}$ (or $1-x_{i}$) for all $i\in[n]$. Moreover,   the
lower bound is achieved by a pair anti-symmetric dictator functions,
i.e., $f(x^{n})=1-g(x^{n})=x_{i}$ for all $i\in[n]$. Hence, 
\begin{equation}
\overline{\Gamma}^{(n)}\Big(\frac{1}{2},\frac{1}{2}\Big)=\frac{1+\rho}{4}\quad\mbox{and}\quad\underline{\Gamma}^{(n)}\Big(\frac{1}{2},\frac{1}{2}\Big)=\frac{1-\rho}{4}\quad \mbox{for all}\;\,  n\ge1.
\end{equation}

This result also can be proven by the
hypercontractivity method and Fourier analysis; these are discussed in the next two subsections. 


\subsection{Case of $a=b={1}/{2}$: Hypercontractivity Method}  \label{sec:ab_half}

The classic {\em hypercontractivity inequalities} form an important
class of functional inequalities. These inequalities
play a fundamental role in the NICD problem when
the means of the Boolean functions are assumed to be either large or  small. The forward and reverse parts of the hypercontractivity
inequalities for the DSBS are stated in Theorem~\ref{thm:hyper2}  which
follow from \citet{gross1975logarithmic}, \citet{borell1982positivity},
and \citet{ODonnell14analysisof}.

We commence with some definitions. For  $f:\mathcal{X}^{n}\to[0,\infty)$ and $g:\mathcal{Y}^{n}\to[0,\infty)$, denote their \emph{inner product} 
\begin{align}
\langle f,g\rangle:=\mathbb{E}[f(X^{n})g(Y^{n})],
\end{align} 
where the expectation is taken with respect to $\pi_{XY}^n$. 
Define the \emph{$L^{p}$-norm} for $p\in [1,\infty)$ and  the \emph{pseudo $L^{p}$-norm} for $p\in (-\infty,1)\backslash \{0\} $ as 
\begin{align}
\Vert f\Vert_{p}:=(\mathbb{E}[f^{p}(X^{n})])^{1/p}. \label{eq:LpNorm}
\end{align} 
For $p\in \{0,\pm \infty\}$, $\Vert f\Vert_{p}$ is defined by its continuous extensions. Specifically, 
\begin{align}
\Vert f\Vert_{0} & := \rme^{\bbE [\ln f(X^n)]},\\
\Vert f\Vert_{\infty } & := \max_{x^n\in \calX^n} f(x^n), \quad \mbox{and}  \\
\Vert f\Vert_{-\infty } &:= \min_{x^n\in \calX^n} f(x^n), 
\end{align} 
where $\Vert f\Vert_{0}$ is known as the \emph{geometric mean} of $f$. Note that $\Vert f\Vert_{p} =0$ for $p<0$ if $f$ is not positive $\pi_X$-almost everywhere. 

\begin{figure}
\centering
\begin{overpic}[width=1.05\columnwidth]{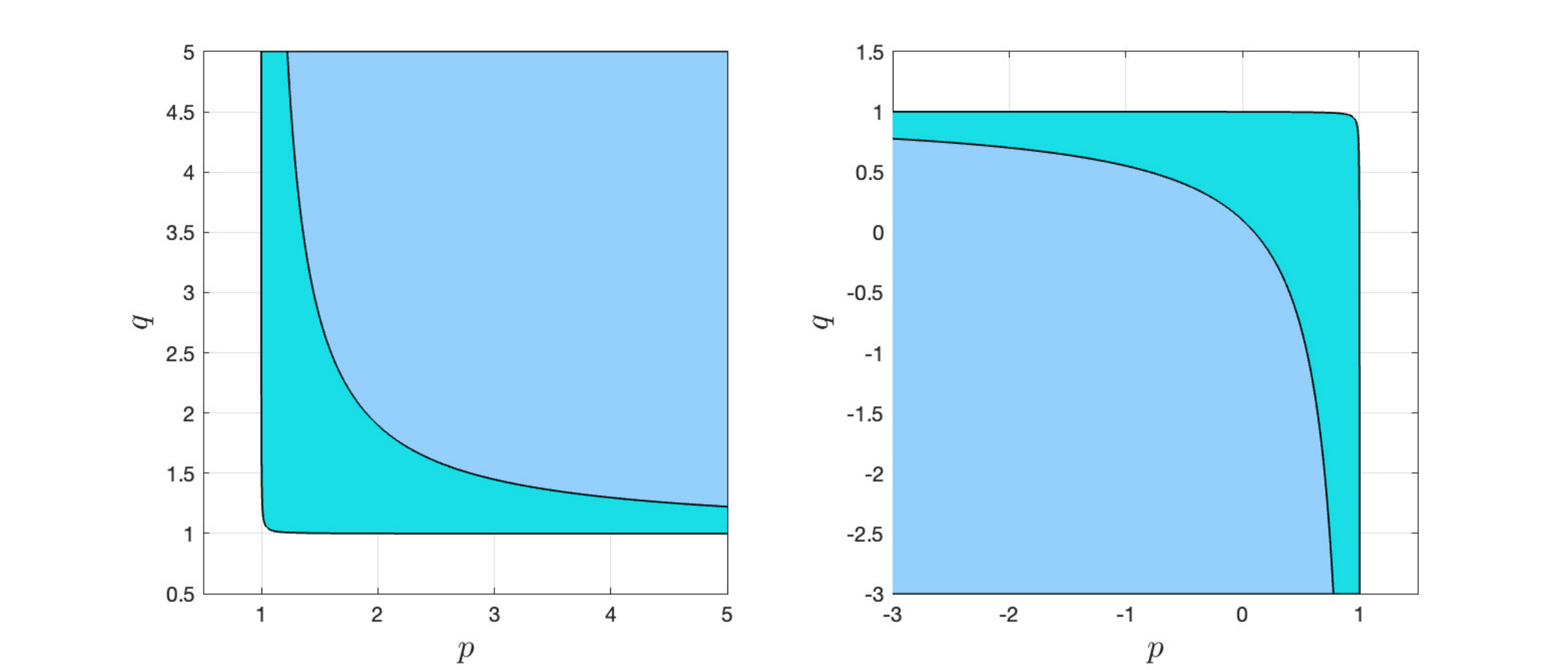}
\put(30,30){{\footnotesize \mbox{$\mathcal{R}_{\mathrm{FH}} (0.95)$}}}
\put(20,10){{\footnotesize \mbox{$\mathcal{R}_{\mathrm{FH}} (0.05)$}}}
\put(65,20){{\footnotesize \mbox{$\mathcal{R}_{\mathrm{RH}} (0.95)$}}}
\put(72,33){{\footnotesize \mbox{$\mathcal{R}_{\mathrm{RH}} (0.05)$}}}
\end{overpic} 
\caption{Plots of  the forward (left) and reverse (right) hypercontractivity regions in \eqref{eq:FIFHR_nicd} and~\eqref{eq:FIRHR_nicd} for $\rho=0.05$ and $0.95$}
\label{fig:hc_regions}
\end{figure}

For the DSBS $(X,Y)\sim \pi_{XY}$ with correlation coefficient $\rho$, define 
\begin{align}
\hspace{-.25in}\mathcal{R}_{\mathrm{FH}} (\rho) & :=\big\{(p,q)\in[1,\infty]^{2}:(p-1)(q-1)\ge\rho^{2}\big\},\;\;\mbox{and}\label{eq:FIFHR_nicd}\\
\hspace{-.25in}\mathcal{R}_{\mathrm{RH}}(\rho)&:= \big\{(p,q)\!\in\![-\infty,1]^{2}:(p-1)(q-1)\ge\rho^{2} \big\}.\label{eq:FIRHR_nicd}
\end{align}
These regions are respectively called the {\em forward} and {\em reverse hypercontractivity regions} for the DSBS and are illustrated in Fig.~\ref{fig:hc_regions}.

\begin{theorem}[Hypercontractivity: DSBS and Two-Function Version]
\label{thm:hyper2} Let  $(X^{n},Y^{n})\sim \pi_{XY}^{n}$ be a source sequence generated by a DSBS  with correlation coefficient $\rho$.
\begin{enumerate}
\item The inequality 
\begin{align}
\langle f,g\rangle & \leq\Vert f\Vert_{p}\Vert g\Vert_{q}\label{eq:NICDFHC}
\end{align}
holds for all $f:\{0,1\}^n\to[0,\infty)$ and $g:\{0,1\}^n\to[0,\infty)$, if and only if  $(p,q)\in \mathcal{R}_{\mathrm{FH}} (\rho)$.
\item The inequality  
\begin{align}
\langle f,g\rangle & \ge\Vert f\Vert_{p}\Vert g\Vert_{q}\label{eq:NICDRHC}
\end{align}
holds for all $f:\{0,1\}^n\to[0,\infty)$ and  $g:\{0,1\}^n\to[0,\infty)$, if and only if  $(p,q)\in \mathcal{R}_{\mathrm{RH}} (\rho)$.
\end{enumerate}
\end{theorem}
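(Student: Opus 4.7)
The plan is to prove Theorem~\ref{thm:hyper2} by the classical two-step strategy: first establish both inequalities in the two-point case $n=1$, then lift to general $n$ by tensorization; the necessity of $(p-1)(q-1) \geq \rho^2$ is verified separately by a perturbative argument. The forward and reverse parts are proved in parallel, so I focus the sketch on the forward case and indicate the analogous modifications for the reverse case.

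For the sufficiency of the forward inequality, I would fix $(p,q) \in \mathcal{R}_{\mathrm{FH}}(\rho)$ and aim to show $\langle f,g\rangle \leq \|f\|_p \|g\|_q$ for all $f,g : \{0,1\} \to [0,\infty)$. By homogeneity I may normalize $\|f\|_p = \|g\|_q = 1$ and parameterize $f(0) = (1-a)^{1/p}$, $f(1) = (1+a)^{1/p}$ (absorbing the $1/2$ from the uniform marginal), and similarly parameterize $g$ by a scalar $b$; the desired inequality then reduces to a two-real-variable inequality whose validity is equivalent to $(p-1)(q-1) \geq \rho^2$. The main obstacle is this two-point inequality, which is the heart of DSBS hypercontractivity. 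Two standard routes are available: Gross's differential approach \cite{gross1975logarithmic}, which sets $\rho = e^{-t}$, introduces the corresponding Ornstein--Uhlenbeck-type semigroup, differentiates at $t = 0$, and reduces the inequality to the log-Sobolev inequality on $\{0,1\}$; alternatively, the Bonami--Beckner route proceeds by Taylor expansion of both sides in $a$ and $b$ and applies Young's inequality coefficient-by-coefficient.

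Once the two-point case is in hand, tensorization to $\{0,1\}^n$ proceeds by induction on $n$. Writing $\|f\|_p$ for $f: \{0,1\}^n \to \bbR_+$ as an iterated $L^p$-norm over the coordinates, I would apply the two-point inequality in each coordinate together with Minkowski's integral inequality in the forward regime (respectively, its reverse counterpart in the reverse regime) to conclude $\langle f,g \rangle \leq \|f\|_p \|g\|_q$ for the product measure $\pi_{XY}^n$. The reverse case follows the same two-step structure but with H\"older's and Minkowski's inequalities replaced by their reverse counterparts, which are valid precisely in the regime $p,q \leq 1$ and hence match the constraint defining $\mathcal{R}_{\mathrm{RH}}(\rho)$; this is carried out in detail by \cite{borell1982positivity}.

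Necessity of $(p-1)(q-1) \geq \rho^2$ in both parts is handled by a perturbative test: take $f(x) = 1 + a\chi(x)$ and $g(y) = 1 + b\chi(y)$, where $\chi : \{0,1\} \to \{-1,+1\}$ is the nontrivial Fourier character. For small $a, b$, a Taylor expansion yields $\langle f, g \rangle = 1 + \rho a b + O(a^2 b^2)$ and $\|f\|_p \|g\|_q = 1 + \tfrac{(p-1)a^2 + (q-1)b^2}{2} + O(|a|^3 + |b|^3)$. In the forward case, the inequality $\rho a b \leq \tfrac{(p-1)a^2 + (q-1)b^2}{2}$ is required for all sufficiently small $a, b$ of either sign, which by Young's inequality forces $(p-1)(q-1) \geq \rho^2$; the reverse case is analogous with the inequality direction flipped. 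I expect the two-point inequality itself to be the only substantive obstacle, since tensorization and necessity both proceed essentially mechanically once it is established.
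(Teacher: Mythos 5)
Your proposal is mathematically sound and is the classical Bonami--Beckner--Gross route: establish the two-point ($n=1$) inequality, tensorize, and verify necessity by a second-order Taylor expansion. This is a genuinely different route from the one implicit in the monograph, which does not prove Theorem~\ref{thm:hyper2} directly but instead derives it as a specialization of the information-theoretic characterization of hypercontractivity regions (Theorem~\ref{thm:HC}, itself a consequence of the relative-entropy characterization of Brascamp--Lieb exponents in Theorem~\ref{thm:ITcharacterization-1}, proved via the duality Lemma~\ref{lem:dual}, tensorization of $\underline{\Lambda}_{p,q}$ and $\overline{\Lambda}_{p,q}$, and Sanov's theorem). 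Your route is more elementary and self-contained and makes the dependence of $\mathcal{R}_{\mathrm{FH}}(\rho)$ on the single parameter $\rho$ entirely transparent; the paper's route is more uniform and immediately extends to arbitrary finite alphabets without recomputing a base case.

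Two small clarifications. First, for the two-function tensorization you don't actually need Minkowski's integral inequality: conditioning on the last coordinate gives $\langle f,g\rangle_{\pi^n} \le \langle F,G\rangle_{\pi^{n-1}}$ with $F(x^{n-1}) = \|f(x^{n-1},\cdot)\|_p$ and $G(y^{n-1}) = \|g(y^{n-1},\cdot)\|_q$, and $\|F\|_{L^p(\pi^{n-1})} = \|f\|_{L^p(\pi^n)}$ by Fubini alone (this is exactly the paper's proof of Lemma~\ref{lem:BLtensorization}); Minkowski is what you would invoke if you instead tensorized the single-function version $\|T_\rho f\|_q \le \|f\|_p$ and then passed to the two-function form by duality (Proposition~\ref{prop:equivalence}). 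Second, the reverse region $\mathcal{R}_{\mathrm{RH}}(\rho)$ includes $p$ or $q$ negative, where $\|\cdot\|_p$ is a pseudo-norm that vanishes if $f$ does; your necessity perturbation $f = 1 + a\chi$ avoids this because $1 + a\chi > 0$ for small $a$, but for sufficiency you should restrict to strictly positive $f,g$ when $p$ or $q \le 0$ and then extend by monotone limits. Neither point is a gap in the idea, and the only substantive piece you leave unproved is the two-point inequality itself, for which the routes you cite (Gross's log-Sobolev differentiation, or the Bonami--Beckner coefficient-by-coefficient comparison) are standard and correct.
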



These two inequalities (due to \cite{gross1975logarithmic, borell1982positivity, ODonnell14analysisof}) are known as the {\em two-function versions}
of the  hypercontractivity inequalities for the DSBS. These
inequalities are equivalent to the following {\em single-function
versions} of the hypercontractivity inequalities for the DSBS. 

Before we describe these single-function versions, we introduce some additional notation. Denote $q'=\frac{q}{q-1}$ as  the \emph{H\"older conjugate} of $q$ for $q\ne 1$; for $q=1$, both $q=\pm\infty$ are H\"older conjugates of $q$. For  a   DSBS sequence  $(X^{n},Y^{n})\sim \pi_{XY}^{n} = \pi_{X|Y}^n\times \pi_Y^n$ with correlation coefficient $\rho$, the {\em noise operator} or {\em conditional expectation
operator} $T_{\rho}$ (or $\pi_{X|Y}^{n}$)  as 
\begin{equation}
T_{\rho}f(y^{n}):=\mathbb{E}[f(X^{n})\mid Y^{n}=y^{n}]=\sum_{x^n \in\calX^n}f(x^n ) \pi_{X  |Y }^n(x^n|y^n).\label{eqn:noise_op}
\end{equation}
One can easily check that $T_{\rho_1\rho_2}=T_{\rho_1}T_{\rho_2}$ for all $\rho_1,\rho_2 \in [0,1]$.

\begin{theorem}[Hypercontractivity: DSBS and Single-Function Version]
\label{thm:hyper_single} Let $(X^{n},Y^{n})\sim \pi_{XY}^{n}$ be a source sequence generated by a DSBS with correlation coefficient $\rho$.
\begin{enumerate}
\item The inequality 
\begin{align}
\Vert T_{\rho}f\Vert_{q} & \leq\Vert f\Vert_{p}\label{eq:NICDFHC-1}
\end{align}
holds for all $f:\{0,1\}^n\to[0,\infty)$, if and only if  $(p,q')\in \mathcal{R}_{\mathrm{FH}} (\rho)$ (with $1':=\infty$).
\item The inequality 
\begin{align}
\Vert T_{\rho}f\Vert_{q} & \ge\Vert f\Vert_{p}\label{eq:NICDRHC-1}
\end{align}
holds for all $f:\{0,1\}^n\to[0,\infty)$, if and only if  $(p,q')\in \mathcal{R}_{\mathrm{RH}} (\rho)$ (with $1':=-\infty$).
\end{enumerate}
\end{theorem}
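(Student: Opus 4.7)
The plan is to derive Theorem \ref{thm:hyper_single} from Theorem \ref{thm:hyper2} by an $L^p$-duality argument that converts the two-function bilinear inequalities into single-function operator-norm inequalities by optimizing over one of the functions.

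First I will record the adjoint identity that underlies the equivalence. For every $f,g:\{0,1\}^n\to[0,\infty)$, the tower property of conditional expectation gives
\[
  \bbE\big[T_\rho f(Y^n)\, g(Y^n)\big]=\bbE\big[\bbE[f(X^n)\mid Y^n]\, g(Y^n)\big]=\bbE\big[f(X^n)\, g(Y^n)\big]=\langle f,g\rangle,
\]
so the inner product $\bbE[T_\rho f(Y^n)g(Y^n)]$ computed under $\pi_Y^n$ agrees with $\langle f,g\rangle$ computed under $\pi_{XY}^n$. Crucially, under the DSBS the marginals satisfy $\pi_X=\pi_Y=\mathrm{Bern}(1/2)$, so the $L^p$-norms appearing on the two sides are taken with respect to the same measure.

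For Part 1 I will invoke the standard $L^p$-duality: for $q\in[1,\infty]$ with H\"older conjugate $q'=q/(q-1)$ (and $1'=\infty$, $\infty'=1$),
\[
  \Vert T_\rho f\Vert_q=\sup\big\{\bbE[T_\rho f(Y^n)g(Y^n)]:g\geq 0,\ \Vert g\Vert_{q'}\leq 1\big\}=\sup\big\{\langle f,g\rangle:g\geq 0,\ \Vert g\Vert_{q'}\leq 1\big\},
\]
where the second equality is the adjoint identity above. Hence $\Vert T_\rho f\Vert_q\leq\Vert f\Vert_p$ for every nonnegative $f$ is equivalent to $\langle f,g\rangle\leq\Vert f\Vert_p\Vert g\Vert_{q'}$ for every pair of nonnegative $f,g$. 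Applying Theorem \ref{thm:hyper2}(1) with its parameter $q$ replaced by $q'$ yields precisely that this holds iff $(p,q')\in\mathcal{R}_{\mathrm{FH}}(\rho)$, which is the claim of Part 1.

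For Part 2 the analogous reverse-H\"older duality will handle the opposite direction: for $q\in[-\infty,1]$ with $q'=q/(q-1)$ (and $1'=-\infty$, $(-\infty)'=1$), and under strict positivity of the factors,
\[
  \Vert T_\rho f\Vert_q=\inf\big\{\bbE[T_\rho f(Y^n)g(Y^n)]:g>0,\ \Vert g\Vert_{q'}=1\big\}=\inf\big\{\langle f,g\rangle:g>0,\ \Vert g\Vert_{q'}=1\big\}.
\]
Consequently $\Vert T_\rho f\Vert_q\geq\Vert f\Vert_p$ for all nonnegative $f$ is equivalent to $\langle f,g\rangle\geq\Vert f\Vert_p\Vert g\Vert_{q'}$ for all strictly positive $f,g$, and by Theorem \ref{thm:hyper2}(2) this is in turn equivalent to $(p,q')\in\mathcal{R}_{\mathrm{RH}}(\rho)$. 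The main obstacle will be the technical handling of the endpoint orders $q\in\{0,1,\pm\infty\}$, where the pseudo $L^p$-norms are defined by limits and where the reverse H\"older duality is delicate on the boundary; I expect to dispense with these by a perturbation $f\mapsto f+\varepsilon$ followed by $\varepsilon\downarrow 0$, exploiting the continuity of $p\mapsto\Vert f\Vert_p$ on $(0,\infty)$ together with direct checking at the conventional limits $p\in\{0,\pm\infty\}$.
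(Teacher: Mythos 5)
Your duality reduction is exactly the paper's approach: the paper defers the proof of Theorem~\ref{thm:hyper_single} to Proposition~\ref{prop:equivalence} in Section~\ref{subsec:Single-Function-Version}, where the identity $\bbE[T_\rho f\cdot g]=\langle f,g\rangle$ is used together with the H\"older variational formula $\Vert \hat g\Vert_q = \sup_g \langle \hat g, g\rangle/\Vert g\Vert_{q'}$ (for $q\geq 1$) and its $\inf$-counterpart (for $q\leq 1$) to pass between the single- and two-function forms, with $q$ replaced by its H\"older conjugate. Your technical worry about the endpoint orders $q\in\{0,1,\pm\infty\}$ and about vanishing $f$ in the reverse direction is legitimate and is handled in the paper by the same devices you propose (conventions $1'=\pm\infty$ and restriction to strictly positive functions).
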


Here we do not delve deeper into the equivalence between the single- and two-function versions of hypercontractivity inequalities, 
since we will discuss the equivalence in detail in Section \ref{subsec:Single-Function-Version}.

By applying the hypercontractivity inequalities,  \citet[Eqns.~(28) and~(29)]{kamath2016non} provided the following bounds.

\begin{theorem}[Hypercontractivity bound for the DSBS] \label{thm:hyper}  Define  the function
\begin{align}
\varphi_{a,b}(s,t,p):=\frac{(s^{p}a+\overline{a})^{\frac{1}{p}}(t^{q}b+\overline{b})^{\frac{1}{q}}-1}{(s-1)(t-1)}-\frac{a}{t-1}-\frac{b}{s-1}
\end{align}
with $q:=1+{\rho^{2}}/{(p-1)}$. Then, for any  sets $\calA$ and $\calB$
with $\pi_{X}^{n}(\calA)=a$ and $\pi_{Y}^{n}(\calB)=b$, 
\begin{align}
 & \sup_{s,t>0, p:(s-1)(t-1)(p-1)<0}\varphi_{a,b}(s,t,p)\le \pi_{XY}^{n}(\calA\times\calB)\label{eqn:hyper_con0}\\
 & \qquad\qquad\qquad\qquad\leq\inf_{s,t>0,p:(s-1)(t-1)(p-1)>0}\varphi_{a,b}(s,t,p).\label{eqn:hyper_con}
\end{align}
\end{theorem}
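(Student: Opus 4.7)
The strategy is to apply the two-function hypercontractivity inequalities of Theorem~\ref{thm:hyper2} to carefully chosen nonnegative ``two-level'' functions built out of the indicators of $\calA$ and $\calB$. For parameters $s,t>0$, I would set
\begin{equation}
f := s\, \bone_{\calA} + \bone_{\calA^c}, \qquad g := t\, \bone_{\calB} + \bone_{\calB^c},
\end{equation}
which are strictly positive (hence admissible for both forward and reverse hypercontractivity, even for $p<0$). Writing $c := \pi_{XY}^n(\calA\times\calB)$ and using $\pi_X^n(\calA)=a$, $\pi_Y^n(\calB)=b$, a direct expansion gives
\begin{equation}
\langle f,g\rangle = c(s-1)(t-1) + sa + tb + 1 - a - b,
\end{equation}
together with $\|f\|_p = (s^p a + \bara)^{1/p}$ and $\|g\|_q = (t^q b + \barb)^{1/q}$, for any $p,q$ for which the (pseudo) $L^p$-norms are defined.

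Next, with $q = 1 + \rho^2/(p-1)$, so that $(p-1)(q-1)=\rho^2$, I would split into two regimes. When $p>1$, also $q>1$, so $(p,q)\in\calR_{\mathrm{FH}}(\rho)$ and the forward inequality $\langle f,g\rangle \le \|f\|_p\|g\|_q$ applies. When $p<1$, also $q<1$, so $(p,q)\in\calR_{\mathrm{RH}}(\rho)$ and the reverse inequality $\langle f,g\rangle \ge \|f\|_p\|g\|_q$ applies. In either case I would isolate $c$ by dividing the resulting inequality by $(s-1)(t-1)$, obtaining
\begin{equation}
\frac{(s^p a+\bara)^{1/p}(t^q b+\barb)^{1/q} - 1}{(s-1)(t-1)} - \frac{a}{t-1} - \frac{b}{s-1} = \varphi_{a,b}(s,t,p)
\end{equation}
on the right-hand side after grouping the linear-in-$s,t$ terms. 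The direction of the resulting inequality on $c$ is governed by the sign of $(s-1)(t-1)$.

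A bookkeeping check shows that an \emph{upper} bound $c \le \varphi_{a,b}(s,t,p)$ arises precisely when (i) $p>1$ and $(s-1)(t-1)>0$, or (ii) $p<1$ and $(s-1)(t-1)<0$, i.e.\ whenever $(s-1)(t-1)(p-1)>0$; dually, a \emph{lower} bound $c \ge \varphi_{a,b}(s,t,p)$ arises exactly when $(s-1)(t-1)(p-1)<0$. Taking the infimum over all $(s,t,p)$ in the first region and the supremum over the second yields \eqref{eqn:hyper_con} and \eqref{eqn:hyper_con0} respectively.

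The main conceptual step is the sign-chasing: four sign choices (of $p-1$, $s-1$, $t-1$, and the inequality direction) must conspire consistently, and one must also check that the pseudo $L^p$-norms $(s^p a+\bara)^{1/p}$ and $(t^q b+\barb)^{1/q}$ remain meaningful and nonnegative for negative $p$ or $q$ (which is ensured by $s,t>0$). Apart from this case analysis, the proof is essentially a one-line substitution into the hypercontractivity inequalities followed by algebraic rearrangement.
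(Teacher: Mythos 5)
Your proposal is correct and takes essentially the same approach as the paper: the paper's proof also substitutes the $\{s,1\}$-valued function $s\bone_{\calA}+\bone_{\calA^c}$ and the $\{t,1\}$-valued function $t\bone_{\calB}+\bone_{\calB^c}$ into Theorem~\ref{thm:hyper2}, and the remaining algebraic rearrangement and sign analysis you give is exactly what is left unstated there.
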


\begin{proof}
This theorem follows by setting $f$ and $g$ in Theorem \ref{thm:hyper2}  to be $\{s,1\}$-valued and $\{t,1\}$-valued functions respectively. Note that changing the range of the functions $f$ and $g$ from $\{0,1\}$ to the sets $\{s,1\}$  and $\{t,1\}$ respectively does not affect the values of the probability masses of the joint distribution of $(f(X^n),g(Y^n))$.
\end{proof}

\begin{figure}
\centering \includegraphics[width=0.8\columnwidth]{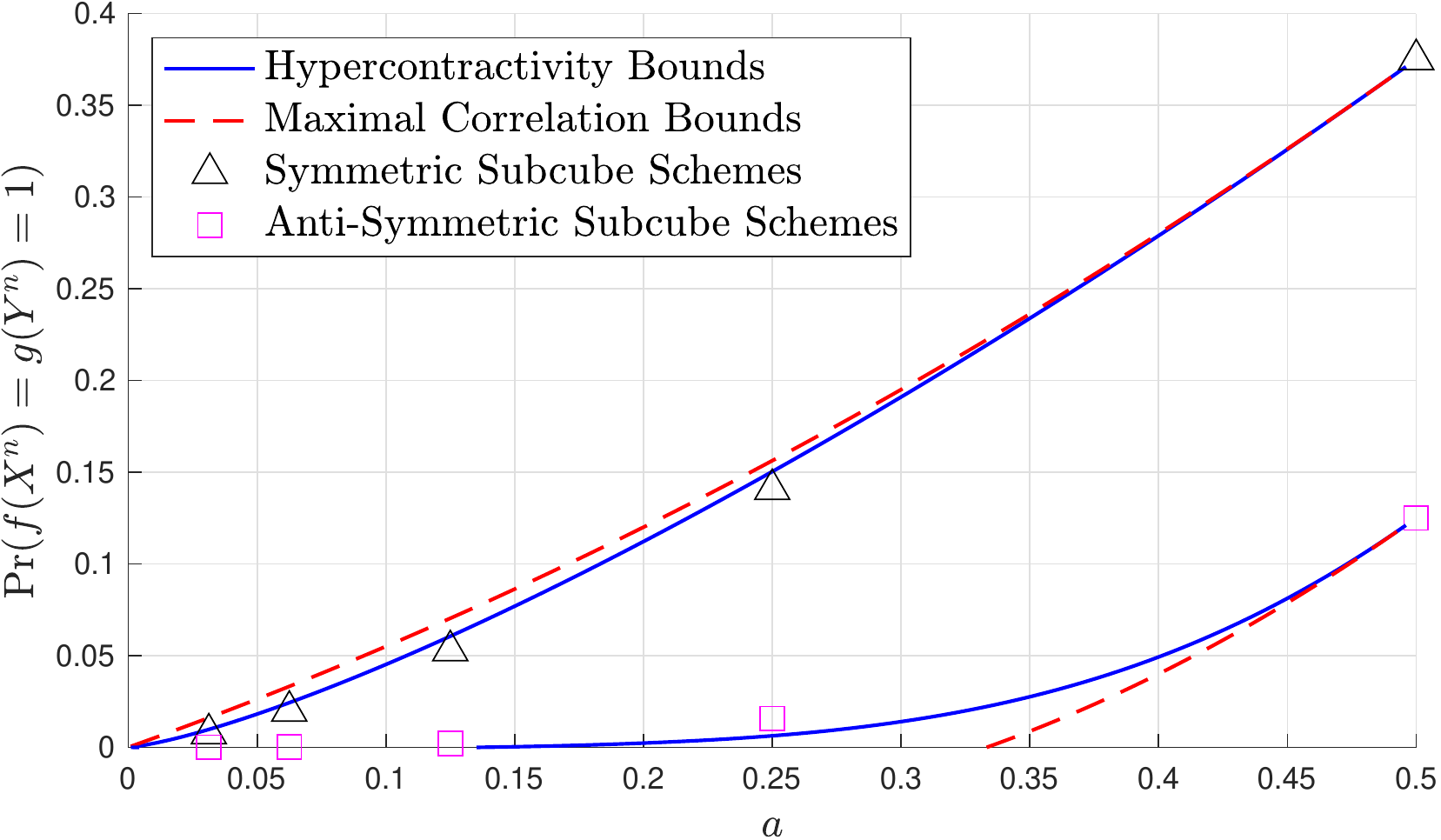}
\caption{Illustration the maximal correlation bounds in \eqref{eqn:max_cor},
the hypercontractivity bounds in \eqref{eqn:hyper_con0}--\eqref{eqn:hyper_con}
as well as the performances of symmetric and anti-symmetric subcube
schemes }
\label{fig:hc_mc} 
\end{figure}

It can be shown analytically that the hypercontractivity bounds are
no worse than the maximal correlation bounds in Theorem~\ref{thm:wit}
for any $a,b\in[0,1]$; see Fig.~\ref{fig:hc_mc} for a numerical
comparison. Moreover, for $a=b=1/2$, the hypercontractivity bounds
in \eqref{eqn:hyper_con0} and \eqref{eqn:hyper_con} reduce to the
sharp bounds $\frac{1-\rho}{4}\le \pi_{XY}^{n}(\calA\times\calB)\le\frac{1+\rho}{4}$,
which correspond to the bounds given by the maximal correlation technique in~\eqref{eqn:mc_half}. 

\subsection{Case of $a=b={1}/{4}$: Boolean Fourier Analysis} \label{sec:quarter}

We now consider the case $a=b={1}/{4}$, and we answer the forward
part of Mossel's mean-$1/4$ stability problem. Mossel's mean-$1/4$
stability problem \cite{mossel2017, mossel2021probabilistic} consists in the determination of $\overline{\Gamma}^{(n)}({1}/{4},1/{4})$
(forward part) and $\underline{\Gamma}^{(n)}({1}/{4},{1}/{4})$ (reverse
part) for $n\ge2$, and also the optimal Boolean functions that attain the maximum
and minimum that define these two quantities.


The forward part of this problem was resolved by the present authors
in~\cite{yu2021non,yu2019improved} using elements of {\em Boolean
Fourier analysis}. We recap some fundamentals of this study here. Given
a Boolean function $f:\{0,1\}^{n}\to\{0,1\}$, its {\em Fourier
coefficients} are defined as 
\begin{align}
\hat{f}_{\calS}:=\bbE[ f(X^{n})\chi_{\calS}(X^n)] =\frac{1}{2^n}\sum_{x^n\in \{0,1\}^{n}}f(x^n)\, \chi_{\calS}(x^n)\quad\mbox{for all}\;\, \calS\!\subset\![n],
\end{align}
where the {\em (Fourier) basis functions} are
\begin{equation}
\chi_{\calS}(x^n):=(-1)^{\sum_{i\in \calS}x_i}\quad\mbox{for all}\;\, x^n\in \{0,1\}^{n},
\end{equation}
and $X^n\sim \Unif\{0,1\}^{n}$. 
The function $f$ can be expressed in terms of the Fourier coefficients as
\begin{align}
f(x^{n})=\sum_{\calS \subset[n]}\hat{f}_\calS  \ \chi_\calS (x^n) \quad\mbox{for all} \;\, x^{n}\in\{0,1\}^{n},
\end{align}
which is known as the {\em Fourier expansion} of $f$.
For $0\le k \le n$, define the {\em degree-$k$ Fourier weight} of $f$ as 
\begin{align}
\mathbf{W}_{k}[f]:=\sum_{\calS \subset [n]:|\calS|=k}\hat{f}_\calS^{2}. \label{eq:NICD-weight}
\end{align}
It is easy to check that if we define the {\em degree-$k$ part of $f$} as $f_k (x^n) := \sum_{\calS \subset [n]:|\calS|=k} \hat{f}_\calS\ \chi_\calS ( x^n )$, then $\bbE[f_k (X^n)^2] = \bW_k[f]$. Hence, $\bW_k[f]$ represents the  ``energy'' of the degree-$k$ part in $f$'s Fourier expansion. 

The Fourier weights satisfy the following properties. Proofs of these
properties can be found in the delightful exposition of Boolean functions
by \citet{ODonnell14analysisof}.

\begin{lemma} \label{lem:W1} For a Boolean function $f:\{0,1\}^{n}\to\{0,1\}$
with mean $a$, 
\begin{align}
\mathbf{W}_{0}[f]=a^{2}\quad\mbox{and}\quad\sum_{k=0}^{n}\mathbf{W}_{k}[f]=a.\label{eqn:wt_prop}
\end{align}
Furthermore, if $(X^{n},Y^{n})\sim \pi_{XY}^{n}$ is a source sequence
of the DSBS with correlation coefficient~$\rho$, then
for any pair of Boolean functions $f,g:\{0,1\}^{n}\to\{0,1\}$, 
\begin{align}
\Pr(f(X^{n})=g(Y^{n})=1) & =\sum_{k=0}^{n}\rho^{k}\sum_{\calS \subset [n]:|\calS |=k}\hat{f}_\calS\,  \hat{g}_\calS \quad\mbox{and}\\
\Pr(f(X^{n})=f(Y^{n})=1) & =\sum_{k=0}^{n}\mathbf{W}_{k}[f]\, \rho^{k}.
\end{align}
\end{lemma}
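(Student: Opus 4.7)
The plan is to establish all four identities by working directly from the Fourier expansions and computing a single key expectation of character products under the DSBS law. First I would compute $\hat{f}_{\emptyset} = \bbE[f(X^n) \cdot 1] = \bbE[f(X^n)] = a$ using the fact that the $X$-marginal of the DSBS is $\mathrm{Bern}(1/2)$, so $X^n \sim \Unif\{0,1\}^n$; this immediately gives $\mathbf{W}_0[f] = \hat{f}_\emptyset^2 = a^2$. For the identity $\sum_k \mathbf{W}_k[f] = a$, I would invoke Parseval's theorem to write $\sum_\calS \hat{f}_\calS^2 = \bbE[f(X^n)^2]$, and then use the Boolean property $f^2 = f$ (since $f$ takes values in $\{0,1\}$) to equate this to $\bbE[f(X^n)] = a$.

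Next, for the joint-probability identity, I would observe that since $f,g$ are $\{0,1\}$-valued, $\Pr(f(X^n)=g(Y^n)=1) = \bbE[f(X^n)g(Y^n)]$. Substituting the Fourier expansions $f = \sum_\calS \hat{f}_\calS \chi_\calS$ and $g = \sum_\calT \hat{g}_\calT \chi_\calT$ and taking expectations yields
\begin{equation}
\bbE[f(X^n)g(Y^n)] = \sum_{\calS,\calT \subset [n]} \hat{f}_\calS \hat{g}_\calT \, \bbE[\chi_\calS(X^n)\chi_\calT(Y^n)].
\end{equation}
The crux is then the single computation of the cross-character expectation. Using independence across coordinates of the DSBS and the identity $\chi_\calS(X^n)\chi_\calT(Y^n) = \prod_i (-1)^{X_i \bone\{i\in\calS\} + Y_i \bone\{i \in \calT\}}$, I would show coordinate by coordinate that the expectation factors. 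For an index $i \in \calS \triangle \calT$ (symmetric difference), the factor becomes $\bbE[(-1)^{X_i}] = 0$ or $\bbE[(-1)^{Y_i}] = 0$, killing the term unless $\calS = \calT$. For $i \in \calS \cap \calT$, a direct calculation using $\Pr(X_i = Y_i) = (1+\rho)/2$ gives $\bbE[(-1)^{X_i+Y_i}] = \rho$. Thus $\bbE[\chi_\calS(X^n)\chi_\calT(Y^n)] = \rho^{|\calS|} \bone\{\calS = \calT\}$, and grouping by $|\calS| = k$ yields the stated formula.

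Finally, the noise-stability-style identity $\Pr(f(X^n) = f(Y^n) = 1) = \sum_k \mathbf{W}_k[f] \rho^k$ follows immediately by specializing $g = f$ in the joint-probability identity and applying the definition of $\mathbf{W}_k[f]$ in \eqref{eq:NICD-weight}. I do not anticipate any genuine obstacle: the whole argument is a short exercise in orthogonality of Fourier characters on the hypercube together with the DSBS product structure. The only modestly delicate point is bookkeeping the vanishing-unless-$\calS=\calT$ reduction cleanly, but this is handled by the symmetric-difference argument above.
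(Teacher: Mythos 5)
Your proof is correct and follows the standard Fourier-analytic route (orthogonality of characters under the hypercube measure plus the coordinatewise factorization under the DSBS product law), which is exactly the argument in the O'Donnell reference the paper points to rather than proving the lemma itself; your symmetric-difference bookkeeping and the computation $\bbE[(-1)^{X_i+Y_i}] = \Pr(X_i = Y_i) - \Pr(X_i \ne Y_i) = \rho$ are accurate. No gaps.
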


For $\rho\in(0,1)$, lower degree Fourier weights have a higher contribution
to the joint probability $\Pr(f(X^{n})=f(Y^{n})=1)$ than higher degree
weights. Hence, to bound this joint probability, we can focus on bounding
the lower degree Fourier weights of $f$. Observe from~\eqref{eqn:wt_prop}
that given the mean of $f$, the degree-$0$ Fourier weight is fully
specified. Hence, it is instructive to estimate the {\em second
most important} Fourier weight. In particular, we are interested
in the {\em degree-$1$ Fourier weight}~$\mathbf{W}_{1}[f]$ under the condition
that the mean of $f$ is specified. In the literature, there exist
several bounds on $\mathbf{W}_{1}[f]$. These include Chang's bound, which can
be found in~\cite[Level-1 Inequality]{ODonnell14analysisof} and~\cite{chang2002polynomial}
and the linear programming (LP) bounds of \citet{fu2001minimum} and
\citet{yu2019improved}. In particular, the LP bounds state that 
\begin{align}
\mathbf{W}_{1}[f]\leq\varphi(a):=\begin{cases}
{\displaystyle 2a(\sqrt{a}-a)} & 0\le a\le1/4\vspace{.03in}\\
{a}/{2} & {1}/{4}<a\le{1}/{2}
\end{cases}.\label{eqn:lp_bounds}
\end{align}
By the Cauchy--Schwarz inequality, one easily observes that 
\begin{align}
 &\hspace{-.2in} \Pr(f(X^{n})=g(Y^{n})=1)\nn\\*
 &\hspace{-.2in} \;\le\max\big\{\!\Pr(f(X^{n})\!=\! f(Y^{n})\!=\! 1),\Pr(g(X^{n})\!=\! g(Y^{n})\!=\! 1)\big\}.\label{eqn:cs_iq}
\end{align}
This inequality implies that in the determination of $\overline{\Gamma}^{(n)}(a,a)$
(the symmetric case in which $a=b$), it suffices to consider a pair
of {\em identical} Boolean functions.

By combining the ideas in Lemma~\ref{lem:W1}, the LP bound in~\eqref{eqn:lp_bounds}
and~\eqref{eqn:cs_iq}, the present authors proved the following
result \cite{yu2019improved,yu2021non}. \begin{theorem} For all
$a\in[0,1]$ and $n\ge2$, 
\begin{align}
\overline{\Gamma}^{(n)}(a,a) & \le a^{2}+\rho\varphi(a)+\rho^{2}\big(a-a^{2}-\varphi(a)\big).
\end{align}
\end{theorem}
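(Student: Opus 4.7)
The proof plan is to combine the three ingredients highlighted in the paragraph preceding the theorem: the Cauchy--Schwarz-type reduction to identical functions in \eqref{eqn:cs_iq}, the Fourier decomposition of the joint probability from Lemma~\ref{lem:W1}, and the level-1 LP bound \eqref{eqn:lp_bounds}.

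First I would reduce to the diagonal case $f=g$. Recall that $\overline{\Gamma}^{(n)}(a,a)$ is the supremum of $\Pr(f(X^n)=g(Y^n)=1)$ over pairs $(f,g)$ of Boolean functions with $\Pr(f(X^n)=1),\Pr(g(Y^n)=1)\le a$. By the remark following Definition~\ref{def:for_rev_jp}, for dyadic $a$ the optimum is attained when both means equal $a$ exactly, and the remaining values of $a\in[0,1]$ may be handled by continuity. Inequality \eqref{eqn:cs_iq} then reduces the problem to bounding $\Pr(f(X^n)=f(Y^n)=1)$ for a single Boolean $f$ of mean $a$.

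Next I would expand this diagonal joint probability in the Fourier basis using Lemma~\ref{lem:W1}:
\begin{equation}
\Pr\big(f(X^n)=f(Y^n)=1\big)=\sum_{k=0}^{n}\mathbf{W}_k[f]\,\rho^k.
\end{equation}
Since $\rho\in(0,1)$, $\rho^k\le\rho^2$ for all $k\ge 2$, and the two identities \eqref{eqn:wt_prop}---namely $\mathbf{W}_0[f]=a^2$ and $\sum_{k=2}^n\mathbf{W}_k[f]=a-a^2-\mathbf{W}_1[f]$---combine to give
\begin{equation}
\sum_{k=0}^n\mathbf{W}_k[f]\,\rho^k\le a^2+\rho\,\mathbf{W}_1[f]+\rho^2\bigl(a-a^2-\mathbf{W}_1[f]\bigr).
\end{equation}

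Finally, observe that the coefficient of $\mathbf{W}_1[f]$ on the right-hand side is $\rho-\rho^2\ge 0$ because $\rho\in(0,1)$, so the bound is nondecreasing in $\mathbf{W}_1[f]$. Substituting the LP inequality $\mathbf{W}_1[f]\le\varphi(a)$ from \eqref{eqn:lp_bounds} gives
\begin{equation}
\Pr\big(f(X^n)=f(Y^n)=1\big)\le a^2+\rho\varphi(a)+\rho^2\bigl(a-a^2-\varphi(a)\bigr),
\end{equation}
which is the claimed bound on $\overline{\Gamma}^{(n)}(a,a)$.

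The argument is essentially assembly once the three cited tools are available; no further combinatorial or analytic input is required. The main obstacle in a from-scratch proof would therefore be establishing \eqref{eqn:lp_bounds}, i.e., the sharp bound $\mathbf{W}_1[f]\le\varphi(a)$ on the degree-1 Fourier weight of a Boolean function of prescribed mean $a$, especially the nonlinear piece $2a(\sqrt{a}-a)$ active for $a\le 1/4$; our plan treats this as a black box cited from the preceding discussion of Chang's and Fu--Shen--Wei-type LP inequalities.
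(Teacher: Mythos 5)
Your proposal is correct and assembles exactly the three ingredients the paper cites as constituting the proof (the Cauchy--Schwarz reduction~\eqref{eqn:cs_iq}, the Fourier decomposition in Lemma~\ref{lem:W1}, and the LP bound~\eqref{eqn:lp_bounds}), and the algebra $\sum_k \rho^k\mathbf{W}_k\le \mathbf{W}_0+\rho\mathbf{W}_1+\rho^2(a-\mathbf{W}_0-\mathbf{W}_1)$ followed by $\mathbf{W}_1\le\varphi(a)$ is precisely the intended route. One small wrinkle: for non-dyadic $a$ you cannot appeal to ``continuity'' because $\overline{\Gamma}^{(n)}(a,a)$ is a step function of $a$ at fixed $n$; the correct argument is that the right-hand side $a^2+\rho\varphi(a)+\rho^2(a-a^2-\varphi(a))=(1-\rho^2)a^2+\rho^2 a+(\rho-\rho^2)\varphi(a)$ is nondecreasing in $a$ on $[0,1/2]$ (each term is), so the bound for the nearest dyadic $a'\le a$ propagates to $a$ --- or, more directly, observe that your three estimates $\mathbf{W}_0=(a')^2\le a^2$, $\mathbf{W}_1\le\varphi(a')\le\varphi(a)$, and $\sum_k\mathbf{W}_k=a'\le a$ already hold for a function of mean $a'\le a$, so no separate reduction to exact means is needed.
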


Particularizing this upper bound to $a=b=1/4$, we obtain $\overline{\Gamma}^{(n)}(1/4,1/4)\le(\frac{1+\rho}{4})^{2}$.
Per the discussion leading to~\eqref{eqn:subcube_id}, this upper
bound is attained by a pair of identical $(n-2)$-subcubes. Hence,
\begin{equation}
\overline{\Gamma}^{(n)}\Big(\frac{1}{4},\frac{1}{4}\Big)=\Big(\frac{1+\rho}{4}\Big)^{2}\quad\mbox{for all}\;\,  n\ge2, \label{eq:quarter_quarter}
\end{equation}
resolving the forward part of Mossel's mean-$1/4$ stability problem.
However, the reverse part of the same problem (i.e., which Boolean
functions attain $\underline{\Gamma}^{(n)}(1/4,1/4)$) remains open.

\section{Converse in the  Moderate Deviations Regime}
\label{sec:sse}


We now consider the optimality of Hamming balls and spheres in the
MD regime and the CL regime with small~$a$ and~$b$. To address
this question, we resort to two key ideas, namely the hypercontractivity
inequalities in Theorem \ref{thm:hyper2} and the  {\em small set expansion (SSE)} theorem. 


A well-known result to address the optimality of Hamming balls and
spheres in the MD regime and the CL regime with small~$a$ and~$b$
is the  SSE theorem  \cite{ODonnell14analysisof,mossel2006non}, which is a consequence of the hypercontractivity
inequalities in Theorem \ref{thm:hyper2}. 
\begin{theorem}[Small set expansion: DSBS version] \label{thm:sse}
For any $n\ge1$ and $\alpha,\beta>0$, 
\begin{align}
\underline{\Upsilon}_{\mathrm{MD}}^{(n)}(\alpha,\beta) & \ge\underline{\Upsilon}_{\mathrm{MD}}(\alpha,\beta)\quad\mbox{and}\\
\overline{\Upsilon}_{\mathrm{MD}}^{(n)}(\alpha,\beta) & \leq\overline{\Upsilon}_{\mathrm{MD}}(\alpha,\beta),
\end{align}
where $\underline{\Upsilon}_{\mathrm{MD}}$ and $\overline{\Upsilon}_{\mathrm{MD}}$
are 
expressed in closed form for the DSBS in~\eqref{eq:NICD-MD}
and~\eqref{eq:NICD-MD2} respectively. \end{theorem}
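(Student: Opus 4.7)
The plan is to derive the SSE theorem as a direct consequence of the two-function hypercontractivity inequalities in Theorem~\ref{thm:hyper2}, by specializing to indicator functions of the sets $\calA$ and $\calB$. The crucial observation is that for any $p>0$ and any set $\calA\subset\{0,1\}^n$, the indicator $f=\bone_{\calA}$ satisfies $f^p=f$ pointwise, so $\|f\|_p=(\bbE[\bone_{\calA}(X^n)])^{1/p}=a^{1/p}$, where $a=\pi_X^n(\calA)$; analogously $\|\bone_{\calB}\|_q=b^{1/q}$. Moreover, $\langle\bone_{\calA},\bone_{\calB}\rangle=\pi_{XY}^n(\calA\times\calB)$. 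This reduces the SSE bounds to a clean optimization over the hypercontractivity parameters.

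For the forward part, applying \eqref{eq:NICDFHC} to these indicators gives $\pi_{XY}^n(\calA\times\calB)\le a^{1/p}b^{1/q}$ for every $(p,q)\in\calR_{\mathrm{FH}}(\rho)$. Substituting the MD scaling $a=2^{-\theta_n\alpha}$, $b=2^{-\theta_n\beta}$ and applying $-\frac{1}{\theta_n}\log(\cdot)$ yields
\begin{equation}
-\frac{1}{\theta_n}\log\pi_{XY}^n(\calA\times\calB)\ge \frac{\alpha}{p}+\frac{\beta}{q}.
\end{equation}
Since the right-hand side is free of $(\calA,\calB)$, maximizing over feasible $(p,q)$ produces
\begin{equation}
\underline{\Upsilon}_{\mathrm{MD}}^{(n)}(\alpha,\beta)\ge \sup_{(p,q)\in\calR_{\mathrm{FH}}(\rho)}\Big(\tfrac{\alpha}{p}+\tfrac{\beta}{q}\Big).
\end{equation}
I will then verify that this supremum equals the closed form in \eqref{eq:NICD-MD}: parametrize the active constraint by $q=1+\rho^2/(p-1)$, set $t=p-1>0$, and differentiate $\alpha/(1+t)+\beta t/(t+\rho^2)$. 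The interior critical point $t^\ast=\rho(\sqrt{\beta}-\rho\sqrt{\alpha})/(\sqrt{\alpha}-\rho\sqrt{\beta})$ is positive precisely when $\rho^2\alpha\le\beta\le\alpha/\rho^2$, in which case direct substitution gives $(\alpha+\beta-2\rho\sqrt{\alpha\beta})/(1-\rho^2)$; otherwise the optimum is attained on the boundary $t\to 0$ or $t\to\infty$, yielding $\alpha$ or $\beta$ respectively, matching the three cases in \eqref{eq:NICD-MD}.

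The reverse part proceeds analogously, using \eqref{eq:NICDRHC} in place of \eqref{eq:NICDFHC}. Since $p<0$ is incompatible with indicator inputs (because $\bone_{\calA}^p$ is not defined when $\bone_{\calA}=0$), I restrict attention to the subregion $\calR_{\mathrm{RH}}(\rho)\cap(0,1]^2$, on which $\|\bone_{\calA}\|_p=a^{1/p}$ still holds. This yields $\pi_{XY}^n(\calA\times\calB)\ge a^{1/p}b^{1/q}$ and hence
\begin{equation}
\overline{\Upsilon}_{\mathrm{MD}}^{(n)}(\alpha,\beta)\le \inf_{(p,q)\in\calR_{\mathrm{RH}}(\rho)\cap(0,1]^2}\Big(\tfrac{\alpha}{p}+\tfrac{\beta}{q}\Big).
\end{equation}
Parametrizing the tight constraint as $(1-p)(1-q)=\rho^2$ with $s=1-q\in(\rho^2,1)$, so $p=(s-\rho^2)/s$, reduces the infimum to minimizing $\alpha s/(s-\rho^2)+\beta/(1-s)$ over $s\in(\rho^2,1)$; the stationary point $s^\ast=\rho(\sqrt{\alpha}+\rho\sqrt{\beta})/(\sqrt{\beta}+\rho\sqrt{\alpha})$ lies in this interval for all $\alpha,\beta>0$, and substitution delivers the single-expression answer $(\alpha+\beta+2\rho\sqrt{\alpha\beta})/(1-\rho^2)$, in agreement with \eqref{eq:NICD-MD2}.

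The routine part of the argument is the one-variable calculus; the main conceptual obstacle is the handling of indicator functions in the reverse inequality, since the pseudo-norms $\|\cdot\|_p$ for $p\le 0$ degenerate on indicators. I expect this is harmless in the MD regime because the subregion $\calR_{\mathrm{RH}}(\rho)\cap(0,1]^2$ alone already produces the conjectured closed form; nevertheless, a clean justification may require a perturbation argument that replaces $\bone_{\calA}$ by $\bone_{\calA}+\epsilon$, applies the full reverse hypercontractivity inequality, and takes $\epsilon\downarrow 0$. A secondary bookkeeping issue is the boundary analysis in the forward case: one must argue that the sup over $\calR_{\mathrm{FH}}(\rho)$ is indeed attained (or approached) on the curve $(p-1)(q-1)=\rho^2$, and that the degenerate limits $p\to 1^+$ (with $q\to\infty$) correctly recover the cut-off exponents $\alpha$ and $\beta$.
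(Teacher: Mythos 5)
Your proposal is correct and matches the paper's proof, which is given only as a one-line sketch: substitute the indicator functions $\bone_{\calA}$ and $\bone_{\calB}$ into \eqref{eq:NICDFHC} and \eqref{eq:NICDRHC} and optimize over $(p,q)$. Your detailed verification of the one-variable optimization and your observation that restricting the reverse inequality to $\calR_{\mathrm{RH}}(\rho)\cap(0,1]^2$ already yields the closed form (so no perturbation argument is needed—for $p\le 0$ the pseudo-norm of an indicator vanishes, making the bound trivial) fills in exactly what the paper leaves implicit.
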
 

The reader might wonder about the term ``small set expansion'' that is used to describe Theorem~\ref{thm:sse}. This term refers to  
a curious phenomenon of the Hamming cube being a ``small set expander'' in the sense that any small subset $\calA\subset\{0,1\}^n$ has an usually large (or expanded) boundary. Here, the Hamming cube is regarded as an
edge-weighted  complete   graph, known as the \emph{$\rho$-stable hypercube graph}, in which each edge $(x^n,y^n)$ is assigned a  weight equal to the probability $\pi_{XY}^n(x^n,y^n)$.
The limiting case as $\rho  \downarrow 0$ of this phenomenon is   quantified by the edge-isoperimetric  inequality which will be stated in Theorem~\ref{thm:EdgeIsoper}. We refer readers to \citet{ODonnell14analysisof} for more intuition about the  term ``small set expansion''. 
\begin{proof}[Proof Sketch of Theorem \ref{thm:sse}] 
Substituting the indicator functions $f\leftarrow\bone_{\calA}$
and $g\leftarrow\bone_{\calB}$ into~\eqref{eq:NICDFHC} and~\eqref{eq:NICDRHC} respectively,
and optimizing over $(p,q)$, we obtain the inequalities as stated in the SSE
theorem. \end{proof}

Due to the equivalence among the CL, MD, and LD exponents for all~$n\in\bbN$ (as discussed after Definition~\ref{def:for_rev_exp})
and the homogeneity property in~\eqref{eq:NICD-6} and~\eqref{eq:NICD-12}, 
$\underline{\Upsilon}_{\mathrm{MD}}^{(n)}(\alpha,\beta)$ and~$\overline{\Upsilon}_{\mathrm{MD}}^{(n)}(\alpha,\beta)$
in Theorem~\ref{thm:sse} can be replaced by $\underline{\Upsilon}_{\mathrm{CL}}^{(n)}(\alpha,\beta)$
and~$\overline{\Upsilon}_{\mathrm{CL}}^{(n)}(\alpha,\beta)$ respectively, or by $\underline{\Upsilon}_{\mathrm{LD}}^{(n)}(\alpha,\beta)$
and~$\overline{\Upsilon}_{\mathrm{LD}}^{(n)}(\alpha,\beta)$ respectively.


The bounds in the SSE theorem are achieved by sequences of Hamming
balls or spherical shells. 
 Hence, these geometric objects 
are optimal in attaining the   MD exponents. 

\section{Converse in the  Large Deviations Regime}
\label{sec:ldr} 
We now address the final asymptotic regime of interest,
namely, the large deviations regime. First, we introduce some terminology. Let
$\calI\subset\bbR^{d}$ be a convex subset of $d$-dimensional Euclidean space.
We recall that for a function $f:\calI\to\bbR$, its {\em lower
convex envelope} $\cvx[f]$ is the function defined at each point
of $\calI$ as the supremum of all convex functions that lie under $f$, i.e., for every $\bx\in\bbR^d$,  $\cvx[f](\bx):=\sup\{g(\bx):g\mbox{ is convex},g\le f\mbox{ on }\calI\}$.
By Carath\'eodory's theorem, equivalently, 
\begin{align} 
\cvx[f](\bx) & =\inf_{\{\bx_i\}_{i=1}^{d+1}\subset\calI,~\{\lambda_i\}_{i=1}^{d+1}}\sum_{i=1}^{d+1}\lambda_i f(\bx_i),\label{eqn:cara}
\end{align}
where $\{ \lambda_i\}_{i=1}^{d+1}$ is a $(d+1)$-dimensional probability mass function with $\sum_{i=1}^{d+1}\lambda_i\bx_i=\bx$. 
The {\em upper concave envelope} $\cve[f]$ is defined as $-\cvx[-f]$.
The SSE theorem in Theorem~\ref{thm:sse} can be strengthened to
the following result, known as the {\em strong SSE theorem}; see
\citet{yu2021graphs} and \citet{yu2021strong}. \begin{theorem}[Strong small set expansion: DSBS version]
\label{thm:strongSSE_DSBS} \label{thm:strongsse} For any $n\ge1$
and $\alpha,\beta\in(0,1]$, 
\begin{align}
\underline{\Upsilon}_{\mathrm{LD}}^{(n)}(\alpha,\beta) & \ge\cvx[\underline{\Upsilon}_{\mathrm{LD}}](\alpha,\beta)\quad\mbox{and}\label{eq:NICDLD}\\
\overline{\Upsilon}_{\mathrm{LD}}^{(n)}(\alpha,\beta) & \leq\cve[\overline{\Upsilon}_{\mathrm{LD}}](\alpha,\beta).\label{eq:NICDLD2}
\end{align}
\end{theorem}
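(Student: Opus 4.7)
The plan is to establish the converse bounds \eqref{eq:NICDLD} and \eqref{eq:NICDLD2} by refining the hypercontractivity proof of the standard SSE theorem (Theorem~\ref{thm:sse}) in two directions: first, by localizing the inequality to individual type classes, and second, by optimizing over the resulting type configurations. The reason this improvement is possible is that plugging indicator functions into the global hypercontractivity inequality (Theorem \ref{thm:hyper2}) only sees the overall $\pi_X^n$- and $\pi_Y^n$-masses of $\calA$ and $\calB$, whereas the LD regime distinguishes \emph{how} those masses are distributed across types.

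\textbf{Step 1: Type decomposition.} Given $\calA, \calB \subset \{0,1\}^n$ with $\pi_X^n(\calA) = 2^{-n\alpha}$ and $\pi_Y^n(\calB) = 2^{-n\beta}$, I would decompose $\calA_{T_X} := \calA \cap \calT_{T_X}$ and $\calB_{T_Y} := \calB \cap \calT_{T_Y}$ and split
$
\pi_{XY}^n(\calA \times \calB) = \sum_{T_X, T_Y \in \calP_n(\{0,1\})} \pi_{XY}^n(\calA_{T_X} \times \calB_{T_Y}).
$
Since $|\calP_n(\{0,1\})| \le n+1$, the exponent on the LHS equals, up to $O((\log n)/n)$, the minimum exponent among the summands.

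\textbf{Step 2: Local hypercontractivity.} For each type pair $(T_X, T_Y)$, I would bound the summand by viewing the restriction of $\pi_{XY}^n$ to $\calT_{T_X} \times \calT_{T_Y}$ as a mixture over joint types compatible with the given marginal types, each corresponding to a joint type class of $\pi_{XY}^n$-probability $\doteq 2^{-n D(T_{XY}\|\pi_{XY})}$. The minimum over compatible $T_{XY}$ is exactly $\rvD(T_X, T_Y\|\pi_{XY})$ from \eqref{eqn:min_rel_ent}. This already yields the ``sharp'' inequality
$
-\tfrac{1}{n}\log \pi_{XY}^n(\calA_{T_X} \times \calB_{T_Y}) \ge \rvD(T_X, T_Y \|\pi_{XY}) + \Phi_{T_X,T_Y}(\tau_X, \tau_Y),
$
where $\tau_X := -\tfrac{1}{n}\log(\pi_X^n(\calA_{T_X})/\pi_X^n(\calT_{T_X}))$ measures the ``fill fraction'' of $\calA_{T_X}$ inside $\calT_{T_X}$, and $\Phi_{T_X,T_Y} \ge 0$ is a residual to be controlled via a local (conditional) hypercontractivity inequality. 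The key point is that taking $\Phi \equiv 0$ is enough because the type-class exponents $D(T_X\|\pi_X)$ already exhaust the marginal constraints $\alpha = -\tfrac{1}{n}\log\sum_{T_X} \pi_X^n(\calA_{T_X})$ up to polynomial factors.

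\textbf{Step 3: Aggregation gives the convex envelope.} Once Step 2 is in hand, the exponent of $\pi_{XY}^n(\calA \times \calB)$ is, asymptotically, the minimum of $\rvD(T_X, T_Y\|\pi_{XY})$ subject to the mass-distribution constraints inherited from $(\alpha,\beta)$. Writing these constraints as $\sum_{T_X} \mu_X(T_X) = \alpha$ and $\sum_{T_Y} \mu_Y(T_Y) = \beta$ where $\mu_X, \mu_Y$ record the type-class mass exponents, and invoking Carath\'eodory's representation \eqref{eqn:cara} of the convex envelope, this minimization is precisely $\cvx[\underline{\Upsilon}_{\mathrm{LD}}](\alpha,\beta)$. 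The bound \eqref{eq:NICDLD2} for the reverse exponent is handled symmetrically by running the argument with reverse hypercontractivity (Theorem \ref{thm:hyper2}, Part~2) in Step~2, producing the upper concave envelope in place of the lower convex one.

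\textbf{Main obstacle.} The delicate piece is rigorously justifying Step~2, i.e., showing that the restriction of $\pi_{XY}^n$ to $\calT_{T_X} \times \calT_{T_Y}$ satisfies a hypercontractive inequality strong enough to preserve the sharp exponent $\rvD(T_X,T_Y\|\pi_{XY})$. Direct use of Theorem \ref{thm:hyper2} on the conditional measure is unavailable because the restricted noise operator no longer has the clean product form on which Theorem~\ref{thm:hyper2} relies. I expect this to require either (i) tensorizing over coordinate blocks sharing the same $(x_i,y_i)$ pattern inside the joint types—which reduces the problem to a product of ``symmetric'' sub-problems amenable to Theorem~\ref{thm:hyper2}—or (ii) a Brascamp--Lieb-type interpolation anchored at the extreme points where $\calA, \calB$ are exactly unions of type classes, in which case Step~1 alone already suffices. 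Either route, together with Steps~1 and~3, completes the proof.
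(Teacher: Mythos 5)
Your Step 2 is incorrect as stated, and the error is not a technicality but breaks the argument entirely. You claim that taking $\Phi\equiv 0$ suffices because ``the type-class exponents $D(T_X\|\pi_X)$ already exhaust the marginal constraints.'' This is false: the constraint $\pi_X^n(\calA)\le 2^{-n\alpha}$ does not force the types actually occupied by $\calA$ to be far from $\pi_X$. Take $\calA$ to be a subset of size $\approx 2^{n(1-\alpha)}$ of the single $\pi_X$-typical type class $\calT_{T_X^*}$ with $T_X^*\approx\pi_X$, and similarly $\calB\subset\calT_{T_Y^*}$. Then $-\tfrac{1}{n}\log\pi_X^n(\calA)\approx\alpha$ but $D(T_X^*\|\pi_X)\approx 0$, and your Step-2 bound with $\Phi\equiv 0$ yields only $-\tfrac{1}{n}\log\pi_{XY}^n(\calA\times\calB)\gtrsim\rvD(T_X^*,T_Y^*\|\pi_{XY})\approx 0$, which is vacuous; the entire content of the theorem is that the exponent in this regime is still at least $\cvx[\underline{\Upsilon}_{\mathrm{LD}}](\alpha,\beta)>0$. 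A nontrivial $\Phi$ depending on the fill fractions $(\tau_X,\tau_Y)$ is therefore essential, and producing it amounts to a conditional SSE theorem on a single type class, where the restricted measure is no longer a product and the tensorization engine behind Theorem~\ref{thm:hyper2} is unavailable—precisely the obstacle you flag but do not resolve. Moreover you have mischaracterized it: the difficulty is not ``preserving the sharp exponent $\rvD$'' (the trivial inclusion $\calA_{T_X}\times\calB_{T_Y}\subset\calT_{T_X}\times\calT_{T_Y}$ already does that); it is extracting the extra exponent $\Phi(\tau_X,\tau_Y)$ from the fill fractions. Separately, Step 1 discards the polynomial factor $|\calP_n(\{0,1\})|\le n+1$, so even if Steps 2--3 were completed you would obtain the bound only up to an $O((\log n)/n)$ slack, whereas the theorem is a dimension-free statement valid exactly for every $n\ge1$.

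The paper's proof (Section~\ref{sec:NICD-Stability}) is structurally unrelated and avoids both issues by never localizing to types. It plugs $\bone_\calA,\bone_\calB$ into the information-theoretic form of the forward hypercontractivity/BL bound (Theorem~\ref{thm:ITcharacterization-1}), which gives, for every $p,q>0$ and every $n\ge1$,
\begin{align}
-\frac{1}{n}\log\pi_{XY}^n(\calA\times\calB)\;\ge\;\frac{a}{p}+\frac{b}{q}+\inf_{s,t\ge 0}\Big(\underline{\Upsilon}(s,t)-\frac{s}{p}-\frac{t}{q}\Big),
\end{align}
where $a=-\tfrac{1}{n}\log\pi_X^n(\calA)$, $b=-\tfrac{1}{n}\log\pi_Y^n(\calB)$, and $\underline{\Upsilon}=\cvx[\underline{\Upsilon}_{\mathrm{LD}}]$. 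This is an exact (non-asymptotic) family of affine lower bounds indexed by $(p,q)$. The convex envelope is then recovered by Legendre duality: choose $(u,v)$ to be a subgradient of the convex nondecreasing function $\underline{\Upsilon}$ at $(a,b)$ (so $u,v\ge 0$), set $p=1/u$, $q=1/v$, and the infimum collapses by the subgradient inequality to give $-\tfrac{1}{n}\log\pi_{XY}^n(\calA\times\calB)\ge\underline{\Upsilon}(a,b)\ge\underline{\Upsilon}(\alpha,\beta)$; the reverse direction is identical using the reverse hypercontractivity characterization. The moral is that the convex envelope arises as the dual of the hypercontractivity family over $(p,q)$, not from aggregating over type classes, and you should redo the argument along those lines rather than attempt a type-local refinement.
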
 The proof  of this theorem (and also its  generalization to the finite alphabet case in Theorem \ref{thm:strongsse-2}) will be provided in Section
\ref{sec:NICD-Stability}. The proof is
based on the information-theoretic characterizations of hypercontractivity
inequalities (also discussed in Section~\ref{ch:funineq}). 

By Carath\'eodory's representation of the lower convex and upper concave envelopes in \eqref{eqn:cara}, the bounds in Theorem~\ref{thm:strongsse} can be asymptotically achieved by
``time-sharing'' at most three (since $d=2$ in our case) concentric or anti-concentric Hamming spheres (or balls) for each length $n$.
Specifically, let $(\lambda_1,\lambda_2,\lambda_3)$ be a PMF, i.e.,  $\lambda_i \ge 0$ for all $i \in [3]$  and $\sum_{i=1}^3 \lambda_i =1$. For each blocklength $n\in\bbN$, this strategy uses certain
concentric or anti-concentric Hamming spheres $\bbS^{(i)}$ for a period of length $\lfloor n \lambda_i \rfloor,i\in[3]$.
Since time-sharing of certain Hamming spheres
is optimal in the LD regime, this confirms a weaker version of the
OPS conjecture (Conjecture~\ref{conj:ordentlich2020note}) in which the convexification and concavification operations in~\eqref{eq:NICDLD} and~\eqref{eq:NICDLD2} respectively are permitted.

Theorem~\ref{thm:strongsse} is known as the {\em strong} SSE
theorem because the bounds given in Theorem~\ref{thm:strongsse}
are asymptotically sharp in the LD regime. This is in contrast to
the ones given in the vanilla SSE theorem (Theorem~\ref{thm:sse})
which are not sharp in the LD regime. Furthermore, both these two theorems
are asymptotically sharp in the MD regime, since the bounds in the
strong SSE theorem reduce to the ones in the SSE theorem, as shown
in  \eqref{eq:NICD-13a} and \eqref{eq:NICD-13}. Hence,
Theorem~\ref{thm:strongsse} is stronger than the SSE theorem (Theorem~\ref{thm:sse}), in the sense that for all $\alpha, \beta \in [0,1]$ and $\gamma >0$,
\begin{align}
\cvx[\underline{\Upsilon}_{\mathrm{LD}}](\alpha,\beta) & \ge\underline{\Upsilon}_{\mathrm{MD}}(\gamma\alpha,\gamma\beta)\quad\mbox{and}\\
\cve[\overline{\Upsilon}_{\mathrm{LD}}](\alpha,\beta) & \le\overline{\Upsilon}_{\mathrm{MD}}(\gamma\alpha,\gamma\beta).
\end{align}


To prove the OPS conjecture, we need to remove the operations
of taking the lower convex and upper concave envelopes in the strong
SSE theorem. This was done by  the first
author of this monograph~\cite{yu2021convexity}.  In particular, he showed that  $\underline{\Upsilon}_{\mathrm{LD}}$ is convex
and $\overline{\Upsilon}_{\mathrm{LD}}$ is concave. 
Combining this result with the strong SSE theorem (Theorem~\ref{thm:strongsse})  allows us to conclude that the OPS conjecture
is unconditionally true and that Hamming balls or spheres (without
time-sharing) are optimal in the LD regime \cite{yu2021convexity}. 
That is, for the DSBS and $\alpha,\beta\in(0,1)$,
\begin{align}
\hspace{-.2in}\underline{\Upsilon}_{\mathrm{LD}}^{(\infty)}(\alpha,\beta)=\underline{\Upsilon}_{\mathrm{LD}}(\alpha,\beta)\quad\mbox{and}\quad\overline{\Upsilon}_{\mathrm{LD}}^{(\infty)}(\alpha,\beta)=\overline{\Upsilon}_{\mathrm{LD}}(\alpha,\beta). \label{eq:opstrue}
\end{align}

Several special cases of \eqref{eq:opstrue} were established in
the literature prior to the most general result of \citet{yu2021convexity}. The limiting cases
as $\rho\downarrow0$ and $\rho\uparrow1$ were shown by \citet{ordentlich2020note}.
The ``symmetric'' special case with $\alpha=\beta$ was shown by \citet{kirshner2021moment}.
We   introduce these results in Section \ref{ch:funineq}, since
they are consequences of strengthened versions of the hypercontractivity inequalities. 

We summarize all converse results  discussed in Sections~\ref{sec:nicd_conv}--\ref{sec:ldr} and techniques used to prove them
in Table \ref{tab:Converse-results-for}.

\begin{table}[!ht]
\caption{\label{tab:Converse-results-for}Converse (optimality) results and techniques for the $2$-user NICD problem
in the CL, MD, and LD regimes }
\centering{\small
\begin{tabular}{|>{\centering}m{1.85cm}|>{\centering}m{1.85cm}|>{\centering}m{1.85cm}|>{\centering}m{2.05cm}|>{\centering}m{2.05cm}|}
\hline 
\multirow{2}{1.9cm}{\centering{}Regimes  } & \multicolumn{2}{c|}{Central Limit} & \centering{}Moderate Deviations & \centering{}Large Deviations\tabularnewline
\cline{2-5} \cline{3-5} \cline{4-5} \cline{5-5} 
 & \centering{}Fixed and large $a,b$ & \centering{}Fixed but small $a,b$ & \centering{}Subexp.\ vanishing $a,b$ & \centering{}Exp.\ vanishing $a,b$\tabularnewline
\hline 
\hline 
\centering{}Maximal Correlation & \centering{}Sharp for $a=b=1/2$ & \centering{}Not sharp & \centering{}Not sharp & \centering{}Not sharp\tabularnewline
\hline 
\centering{}Fourier Analysis & \centering{}Sharp for $a=b=1/2$ and $a=b=1/4$ & \centering{}Not sharp & \centering{}Not sharp & \centering{}Not sharp\tabularnewline
\hline 
\centering{}SSE & \centering{}Not sharp & \multirow{2}{2cm}{\centering{}Essentially sharp} & \multirow{2}{2cm}{\centering{}Sharp} & \centering{}Not sharp\tabularnewline
\cline{1-2} \cline{2-2} \cline{5-5} 
\centering{}Strong SSE & \centering{}Not sharp&  &  & \centering{}Sharp\tabularnewline
\hline 
\end{tabular}}
\end{table}

\section{Extensions to   Sources Beyond the DSBS}
\label{sec:nicd_arb}

Thus far, we have only considered the DSBS. Can the results in Sections \ref{sec:nicd_ach}-\ref{sec:ldr} be extended to other bivariate memoryless sources?
Indeed, the SSE and strong SSE theorems, can be extended to sources
on Polish spaces (separable completely metrizable topological space).
We refer the reader to \cite{yu2021strong} for details. Here for
simplicity, we   discuss analogues of the preceding results for the finite alphabet and bivariate Gaussian cases. The NICD problem for the latter case has been completely solved by \citet{borell1985geometric} and \citet{mossel2015robust}. 


\subsection{Finite Alphabets }
\label{subsec:finite}
In this section, we generalize the NICD problem to the finite alphabet
case in which $\calX$ and $\calY$ are finite sets. 
 Let $\pi_{XY}\in\calP(\calX\times\calY)$. 
For simplicity, we assume that the supports of $\pi_{X}$ and $\pi_{Y}$
are $\mathcal{X}$ and $\mathcal{Y}$ respectively.   Given $\pi_{X}$ and $\pi_{Y}$,
define their maximum exponents of ``atomic events'' as 
\begin{equation}
    \begin{aligned}
 \alpha_{\max}(\pi_{X})&:=\max_{x\in\calX}\;\log\frac{1}{\pi_{X}(x)}\qquad\mbox{and}\\ 
 \beta_{\max}(\pi_{Y})&:=\max_{y\in\calY}\;\log\frac{1}{\pi_{Y}(y)}. \label{eqn:alpha_max}
\end{aligned}
\end{equation}
For $n\ge1$, $\alpha\in(0,\alpha_{\max}(\pi_{X})]$ and $\beta\in(0,\beta_{\max}(\pi_{Y})]$,
re-define the {\em forward} and {\em reverse LD exponents}  respectively as
\begin{align}
\hspace{-.35in}\underline{\Upsilon}_{\mathrm{LD}}^{(n)}(\alpha,\beta) & :=-\frac{1}{n}\log\max_{\substack{\mathcal{A}\subset\mathcal{X}^{n},\mathcal{B}\subset\mathcal{Y}^{n}:\\
\pi_{X}^{n}(\mathcal{A})\leq2^{-n\alpha},\pi_{Y}^{n}(\mathcal{B})\leq2^{-n\beta}
}
}\!  \pi_{XY}^{n}(\mathcal{A}\times\mathcal{B})\;\;\mbox{and}\label{eq:FI-72}\\
\hspace{-.35in}\overline{\Upsilon}_{\mathrm{LD}}^{(n)}(\alpha,\beta) & :=-\frac{1}{n}\log\min_{\substack{\mathcal{A}\subset\mathcal{X}^{n},\mathcal{B}\subset\mathcal{Y}^{n}:\\
\pi_{X}^{n}(\mathcal{A})\ge2^{-n\alpha},\pi_{Y}^{n}(\mathcal{B})\ge2^{-n\beta}
}
}\!  \pi_{XY}^{n}(\mathcal{A}\times\mathcal{B}).\label{eq:FI-71}
\end{align}
Let $\underline{\Upsilon}_{\mathrm{LD}}^{(\infty)}$ and $\overline{\Upsilon}_{\mathrm{LD}}^{(\infty)}$
be their pointwise limits as $n\to\infty$. These are the same as the forward
and reverse LD exponents in \eqref{eqn:LD_for} and \eqref{eqn:LD_rev}
but here, $\pi_{XY}$ is no longer restricted to be a DSBS.  

\begin{theorem}[Strong small set expansion: General version] \label{thm:strongsse-2}
For any joint distribution on a finite alphabet $\pi_{XY}$, any blocklength $n\ge1$, $\alpha\in(0,\alpha_{\max}(\pi_{X})]$, and $\beta\in(0,\beta_{\max}(\pi_{Y})]$,
\eqref{eq:NICDLD} and \eqref{eq:NICDLD2} remain true, with $\underline{\Upsilon}_{\mathrm{LD}}$
and $\overline{\Upsilon}_{\mathrm{LD}}$ defined in \eqref{eqn:LDsphere1}
and \eqref{eqn:LDsphere2} for  $\pi_{XY}$, i.e., 
\begin{equation}
\underline{\Upsilon}_{\mathrm{LD}}(\alpha,\beta)=\min_{Q_X, Q_Y:D(Q_X\|\pi_X)\ge\alpha, D(Q_Y\|\pi_Y)\ge\beta  } \rvD(Q_X, Q_Y\|\pi_{XY}) \label{eq:Upsilon_LD}
\end{equation}
and analogously for $\overline{\Upsilon}_{\mathrm{LD}}$. 
 Moreover,
the inequalities in \eqref{eq:NICDLD} and \eqref{eq:NICDLD2} 
remain asymptotically tight in the limit as $n\to\infty$.  \end{theorem}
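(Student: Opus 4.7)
The plan is to prove both inequalities by an information-theoretic single-letterization and to establish asymptotic tightness by time-sharing of joint type classes. This adapts the DSBS argument of Theorem \ref{thm:strongSSE_DSBS} to general finite alphabets, replacing the hypercontractivity route with a direct relative-entropy manipulation that is transparently compatible with the coupling-based definition of $\underline{\Upsilon}_{\mathrm{LD}}$ and $\overline{\Upsilon}_{\mathrm{LD}}$.

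For the forward converse, fix any $\calA \subset \calX^n$ and $\calB \subset \calY^n$ satisfying the constraints in \eqref{eq:FI-72} and set $P_{X^nY^n} := \pi_{XY}^n(\cdot \mid \calA \times \calB)$. Then $D(P_{X^nY^n}\|\pi_{XY}^n) = -\log \pi_{XY}^n(\calA \times \calB)$, and since the marginals are supported in $\calA, \calB$,
\[D(P_{X^n}\|\pi_X^n) \ge -\log \pi_X^n(\calA) \ge n\alpha, \qquad D(P_{Y^n}\|\pi_Y^n) \ge n\beta.\]
Let $J \sim \Unif[n]$ be independent of $(X^n,Y^n)$ and set $W := J$, $X := X_J$, $Y := Y_J$. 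The chain rule and convexity of relative entropy yield
\begin{align*}
\tfrac{1}{n} D(P_{X^nY^n}\|\pi_{XY}^n) &\ge \mathbb{E}_W D(P_{XY|W}\|\pi_{XY}) \ge \mathbb{E}_W \rvD(P_{X|W}, P_{Y|W}\|\pi_{XY}) \\
&\ge \mathbb{E}_W \underline{\Upsilon}_{\mathrm{LD}}\!\big(D(P_{X|W}\|\pi_X),\, D(P_{Y|W}\|\pi_Y)\big),
\end{align*}
using that $P_{XY|W=w}$ is one particular coupling of its marginals and invoking the definition \eqref{eq:Upsilon_LD}. The same single-letterization applied to the marginal constraints gives $\mathbb{E}_W D(P_{X|W}\|\pi_X) \ge \alpha$ and $\mathbb{E}_W D(P_{Y|W}\|\pi_Y) \ge \beta$. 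Jensen's inequality applied to $\cvx[\underline{\Upsilon}_{\mathrm{LD}}]$, combined with monotonicity of $\underline{\Upsilon}_{\mathrm{LD}}$ in its arguments (immediate from \eqref{eq:Upsilon_LD} since a stricter constraint only enlarges the minimum), yields $\underline{\Upsilon}_{\mathrm{LD}}^{(n)}(\alpha,\beta)\ge \cvx[\underline{\Upsilon}_{\mathrm{LD}}](\alpha,\beta)$. The asymptotic converse for $\overline{\Upsilon}_{\mathrm{LD}}^{(\infty)}$ proceeds by the parallel argument, with $\cvx$ replaced by $\cve$ and the inequalities on the marginal divergences reversed.

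The achievability of the reverse bound \eqref{eq:NICDLD2} at each finite $n$, together with the asymptotic tightness of both inequalities, follows from Carath\'eodory's theorem: write $\cvx[\underline{\Upsilon}_{\mathrm{LD}}](\alpha,\beta) = \sum_{i=1}^{3} \lambda_i \underline{\Upsilon}_{\mathrm{LD}}(\alpha_i,\beta_i)$ with $\sum_i \lambda_i (\alpha_i,\beta_i) = (\alpha,\beta)$, partition $[n]$ into blocks of lengths $\lfloor n\lambda_i \rfloor$, and within the $i^{\mathrm{th}}$ block let $\calA_i$ and $\calB_i$ be the $\pi_X$- and $\pi_Y$-type classes of types approximating the optimizers $(Q_X^{(i)}, Q_Y^{(i)})$ of \eqref{eq:Upsilon_LD}. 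Sanov's theorem (Theorem \ref{thm:sanov}) gives $\pi_X^n(\prod_i \calA_i) \doteq 2^{-n\alpha}$, $\pi_Y^n(\prod_i \calB_i) \doteq 2^{-n\beta}$, and the joint probability factors across blocks as $\doteq 2^{-n \cvx[\underline{\Upsilon}_{\mathrm{LD}}](\alpha,\beta)}$. The analogous construction with anti-concentric type classes attaining \eqref{eqn:LDsphere2} handles the reverse direction.

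The main obstacle will be making the \emph{reverse} converse quantitative: starting from $\pi_X^n(\calA) \ge 2^{-n\alpha}$ does not directly bound $D(P_{X^n}\|\pi_X^n)$ from above for $P_{X^n} = \pi_X^n(\cdot|\calA)$, so the single-letterization must first decompose $\calA$ by marginal types and pass to a dominant type class whose divergence from $\pi_X$ is at most $\alpha + O((\log n)/n)$. These subexponential corrections vanish only as $n \to \infty$, which is precisely why the $\overline{\Upsilon}_{\mathrm{LD}}$ converse is asymptotic whereas its forward counterpart is one-shot. A secondary technical point is that monotonicity of $\underline{\Upsilon}_{\mathrm{LD}}$ must be supplemented by lower semicontinuity so that the infimum in \eqref{eq:Upsilon_LD} is attained; both follow from a compactness argument on the finite simplex of couplings $\calC(Q_X,Q_Y)$.
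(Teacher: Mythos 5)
Your route differs from the paper's, which specializes the Brascamp--Lieb bounds \eqref{eq:FIBLRSSE}--\eqref{eq:FIBLFSSE} to indicator functions and then applies a subgradient argument at $(a,b)=\bigl(-\tfrac{1}{n}\log\pi_X^n(\calA),-\tfrac{1}{n}\log\pi_Y^n(\calB)\bigr)$, whereas you try to single-letterize the relative entropies directly. As written, there are two gaps. The first is in the forward direction: with $W:=J$, the step ``$\mathbb{E}_W D(P_{X|W}\|\pi_X)\ge\alpha$'' is false. Convexity of KL gives $\tfrac{1}{n}D(P_{X^n}\|\pi_X^n)\ge\mathbb{E}_J D(P_{X_J}\|\pi_X)$, the \emph{wrong} direction, so the bound $\tfrac{1}{n}D(P_{X^n}\|\pi_X^n)\ge\alpha$ transfers nothing to the single-letter marginal. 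Concretely, $n=2$, $\pi_X=\Unif\{0,1\}$, $\calA=\{00,01,10\}$, $\calB=\calY^2$ give $\alpha=\tfrac{1}{2}\log\tfrac{4}{3}\approx 0.208$ but $\mathbb{E}_J D(P_{X_J}\|\pi_X)=D\bigl((\tfrac{2}{3},\tfrac{1}{3})\,\|\,(\tfrac{1}{2},\tfrac{1}{2})\bigr)\approx 0.082<\alpha$. The fix is to take $W:=(J,X^{J-1},Y^{J-1})$: the joint chain rule becomes an \emph{equality}, $\tfrac{1}{n}D(P_{X^nY^n}\|\pi_{XY}^n)=\mathbb{E}_W D(P_{XY|W}\|\pi_{XY})$, and since conditioning on additional variables in the first argument only increases a conditional divergence, $\mathbb{E}_W D(P_{X|W}\|\pi_X)=\tfrac{1}{n}\sum_i D(P_{X_i|X^{i-1}Y^{i-1}}\|\pi_X|P_{X^{i-1}Y^{i-1}})\ge\tfrac{1}{n}D(P_{X^n}\|\pi_X^n)\ge\alpha$. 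With that change, the coupling bound $D(P_{XY|W=w}\|\pi_{XY})\ge\rvD(P_{X|W=w},P_{Y|W=w}\|\pi_{XY})$, monotonicity, and Jensen on the lower convex envelope deliver \eqref{eq:NICDLD} by a route arguably more elementary than the paper's.

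The second, more serious gap is the reverse bound \eqref{eq:NICDLD2}. It is \emph{not} obtained by ``the parallel argument with $\cvx$ replaced by $\cve$ and the inequalities reversed.'' Your chain passes through $D(P_{XY|W=w}\|\pi_{XY})\ge\rvD(P_{X|W=w},P_{Y|W=w}\|\pi_{XY})$, a \emph{lower} bound, which is the correct direction for lower-bounding $-\tfrac{1}{n}\log\pi_{XY}^n(\calA\times\calB)$ but useless for the upper bound \eqref{eq:NICDLD2}; the conditional coupling $P_{XY|W=w}$ has no reason to be close to the minimizing one, so there is no matching inequality of the opposite sign. Likewise, $\pi_X^n(\calA)\ge 2^{-n\alpha}$ places no \emph{upper} bound on $\tfrac{1}{n}D(P_{X^n}\|\pi_X^n)$ (take $\calA=\calX^n$ and $\calB$ a singleton: $P_{X^n}=\pi_{X|Y}^n(\cdot|y^n)$ can be far from $\pi_X^n$). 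So the reverse converse cannot be produced by this method, even asymptotically; it genuinely needs the reverse Brascamp--Lieb/hypercontractivity inequality \eqref{eq:FIBLRSSE}, which is what the paper invokes. Your Carath\'eodory/time-sharing achievability argument is correct and is essentially the one given in the paper.
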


However, in general, $\underline{\Upsilon}_{\mathrm{LD}}$ and $\overline{\Upsilon}_{\mathrm{LD}}$
are not necessarily convex and concave, respectively. Hence, unlike
the case of the DSBS, for sources on finite alphabets, the operations of
taking the lower convex and upper concave envelopes in \eqref{eq:NICDLD}
and \eqref{eq:NICDLD2} cannot be removed in general. Nevertheless,  the bounds $\cvx[\underline{\Upsilon}_{\mathrm{LD}}](\alpha,\beta)$
and $\cve[\overline{\Upsilon}_{\mathrm{LD}}](\alpha,\beta)$ can be
asymptotically attained by  \emph{time-sharing} the use of  at most three type classes (cf.\ the discussion after Theorem~\ref{thm:strongsse}). 



Theorem~\ref{thm:strongsse-2}  was first proven by \citet{yu2021graphs} by using information-theoretic and coupling techniques. In this monograph, we will provide a simple proof of Theorem~\ref{thm:strongsse-2}, which
is based on the information-theoretic characterizations of hypercontractivity
inequalities as discussed  in Section~\ref{sec:NICD-Stability}.

Similarly, one can generalize the DSBS-specific definitions in \eqref{eqn:MD_for}
and \eqref{eqn:MD_rev} to an arbitrary distribution $\pi_{XY}$ on
a finite alphabet. Then, the
SSE theorem (Theorem \ref{thm:sse}) can be also generalized
to the finite alphabet case. 

\begin{theorem}[Small set expansion: General version] \label{thm:sse-1}
For any $n\ge1$ and $\alpha,\beta>0$, 
\begin{align}
\underline{\Upsilon}_{\mathrm{MD}}^{(n)}(\alpha,\beta) & \ge\lim_{\epsilon\downarrow0}\frac{1}{\epsilon}\cvx[\underline{\Upsilon}_{\mathrm{LD}}](\epsilon\alpha,\epsilon\beta)\quad\mbox{and}\label{eq:NICDMD}\\
\overline{\Upsilon}_{\mathrm{MD}}^{(n)}(\alpha,\beta) & \leq\lim_{\epsilon\downarrow0}\frac{1}{\epsilon}\cve[\overline{\Upsilon}_{\mathrm{LD}}](\epsilon\alpha,\epsilon\beta).\label{eq:NICDMD2}
\end{align}
Moreover, the inequalities in \eqref{eq:NICDMD} and \eqref{eq:NICDMD2}
are asymptotically tight in the limit as $n\to\infty$. \end{theorem}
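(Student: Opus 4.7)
My plan is to deduce Theorem \ref{thm:sse-1} as a direct corollary of the strong small set expansion theorem (Theorem \ref{thm:strongsse-2}) via a scaling and perspective-function argument, avoiding any fresh combinatorial work. The starting observation is the elementary scaling identity $\underline{\Upsilon}_{\mathrm{MD}}^{(n)}(\alpha,\beta) = \frac{1}{\epsilon_n}\,\underline{\Upsilon}_{\mathrm{LD}}^{(n)}(\epsilon_n\alpha,\epsilon_n\beta)$ with $\epsilon_n := \theta_n/n$, which is immediate from the definitions in \eqref{eqn:LD_for} and \eqref{eqn:MD_for} (and analogously for the reverse exponent). Plugging $(\epsilon_n\alpha,\epsilon_n\beta)$ into the strong SSE bounds \eqref{eq:NICDLD} and \eqref{eq:NICDLD2} immediately gives $\underline{\Upsilon}_{\mathrm{MD}}^{(n)}(\alpha,\beta) \ge \frac{1}{\epsilon_n}\cvx[\underline{\Upsilon}_{\mathrm{LD}}](\epsilon_n\alpha,\epsilon_n\beta)$ and $\overline{\Upsilon}_{\mathrm{MD}}^{(n)}(\alpha,\beta) \le \frac{1}{\epsilon_n}\cve[\overline{\Upsilon}_{\mathrm{LD}}](\epsilon_n\alpha,\epsilon_n\beta)$, valid once $n$ is large enough that $(\epsilon_n\alpha,\epsilon_n\beta)$ lies in the domain of the LD exponents.

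The next step is to invoke a classical property of perspective maps. Since $\cvx[\underline{\Upsilon}_{\mathrm{LD}}]$ is convex (as a lower convex envelope) and vanishes at the origin---because $\underline{\Upsilon}_{\mathrm{LD}}(0,0)=0$, witnessed by $Q_{XY}=\pi_{XY}$ in \eqref{eq:Upsilon_LD}---the map $\epsilon \mapsto \frac{1}{\epsilon}\cvx[\underline{\Upsilon}_{\mathrm{LD}}](\epsilon\alpha,\epsilon\beta)$ is non-decreasing on $(0,\infty)$ (the standard fact that $f(\epsilon x)/\epsilon$ is non-decreasing for any convex $f$ with $f(0)=0$, obtained by writing $\epsilon_1 x$ as a convex combination of $0$ and $\epsilon_2 x$ for $\epsilon_1<\epsilon_2$). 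Hence $\frac{1}{\epsilon_n}\cvx[\underline{\Upsilon}_{\mathrm{LD}}](\epsilon_n\alpha,\epsilon_n\beta) \ge \lim_{\epsilon\downarrow 0}\frac{1}{\epsilon}\cvx[\underline{\Upsilon}_{\mathrm{LD}}](\epsilon\alpha,\epsilon\beta)$, yielding \eqref{eq:NICDMD}. The dual statement \eqref{eq:NICDMD2} follows from the symmetric fact that $\cve[\overline{\Upsilon}_{\mathrm{LD}}]$ is concave with value $0$ at the origin, so the corresponding perspective map is non-increasing in $\epsilon$.

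For the claim of asymptotic tightness as $n\to\infty$, I would match the lower bound on $\underline{\Upsilon}_{\mathrm{MD}}^{(n)}$ (and analogously the upper bound on $\overline{\Upsilon}_{\mathrm{MD}}^{(n)}$) by the same time-shared type-class construction that saturates Theorem \ref{thm:strongsse-2}: for any mixture representation $\cvx[\underline{\Upsilon}_{\mathrm{LD}}](\alpha',\beta') = \sum_{i=1}^{3}\lambda_i\,\underline{\Upsilon}_{\mathrm{LD}}(\alpha_i,\beta_i)$, partition $[n]$ into blocks of size $\lfloor n\lambda_i\rfloor$ and use the optimal LD-regime spherical-shell (or type-class) codes at scale $(\alpha_i,\beta_i)$ on each block. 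Setting $(\alpha',\beta')=(\epsilon_n\alpha,\epsilon_n\beta)$ and applying the Taylor expansion of $\underline{\Upsilon}_{\mathrm{LD}}$ near the origin (paralleling the passage from \eqref{eq:NICD-13a} to \eqref{eq:NICD-13}, namely $D(Q\|\pi)=\tfrac12\hat\chi^2(Q-\pi\|\pi)+o$-terms, and similarly for $\rvD$) produces an upper bound on $\underline{\Upsilon}_{\mathrm{MD}}^{(n)}(\alpha,\beta)$ that converges to the same perspective limit. The hard part will be controlling the gap between $\underline{\Upsilon}_{\mathrm{LD}}$ and its convex envelope in a shrinking neighborhood of $(0,0)$: one must show that this gap is $o(\epsilon)$ so that the achievability bound also tends to $\lim_{\epsilon\downarrow 0}\frac{1}{\epsilon}\cvx[\underline{\Upsilon}_{\mathrm{LD}}](\epsilon\alpha,\epsilon\beta)$ rather than to something strictly larger. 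This should reduce to showing that $\underline{\Upsilon}_{\mathrm{LD}}$ is already locally convex at the origin (which follows because, by the quadratic expansion above, it is locally governed by $\hat\rvX^2$, an infimum of convex quadratics, hence convex near $0$), so that $\cvx[\underline{\Upsilon}_{\mathrm{LD}}]$ and $\underline{\Upsilon}_{\mathrm{LD}}$ agree on a neighborhood of the origin and no convexification gap persists in the perspective limit.
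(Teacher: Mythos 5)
Your derivation of the two dimension-free inequalities \eqref{eq:NICDMD}--\eqref{eq:NICDMD2} is clean and correct: the scaling identity $\underline{\Upsilon}_{\mathrm{MD}}^{(n)}(\alpha,\beta)=\frac{1}{\epsilon_n}\underline{\Upsilon}_{\mathrm{LD}}^{(n)}(\epsilon_n\alpha,\epsilon_n\beta)$ with $\epsilon_n=\theta_n/n$ is immediate from the definitions, Theorem~\ref{thm:strongsse-2} bounds the LD exponent by its envelope, and the perspective-map monotonicity ($\epsilon\mapsto \cvx[\underline{\Upsilon}_{\mathrm{LD}}](\epsilon\alpha,\epsilon\beta)/\epsilon$ is non-decreasing since $\cvx[\underline{\Upsilon}_{\mathrm{LD}}]$ is convex and vanishes at the origin) lets you pass from the $n$-dependent $\epsilon_n$ to the $\epsilon\downarrow 0$ limit. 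This is a nice, economical way to get the converse directly from the strong SSE theorem.

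The asymptotic tightness is where the gaps lie. First, the route you propose --- time-sharing LD-regime type-class codes at scale $(\epsilon_n\alpha,\epsilon_n\beta)$ and appealing to Sanov --- picks up a $O\bigl(\frac{\log n}{n}\bigr)$ polynomial correction per type class. After you divide the $n$-dimensional LD exponent by $\epsilon_n=\theta_n/n$, that error becomes $O\bigl(\frac{\log n}{\theta_n}\bigr)$, which need not vanish: an MD sequence is only required to satisfy $\theta_n\to\infty$ and $\theta_n/n\to 0$. (The paper flags this explicitly in Section~\ref{sec:spheres}: the passage from the LD to the MD exponent via Sanov requires the \emph{additional} hypothesis $(\log n)/\theta_n\to 0$.) For a general MD sequence, the achievability should go through the moderate deviations theorem directly on annuli of width $\Theta(\sqrt{n\theta_n})$ --- as in \eqref{eq:NICDmd}--\eqref{eq:NICDmd-1} --- which incurs no $\log n$ penalty. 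Second, your justification that the convexification gap is $o(\epsilon)$ near the origin is not sound as stated: an infimum of convex quadratics (which is how you argue $\hat\rvX^2$ is locally convex) is not convex in general --- think of $\min(x^2,(x-1)^2)$. In the DSBS case $\hat\rvX^2$ happens to simplify to the form \eqref{eq:NICDX2} and one can check convexity by hand, but for general finite alphabets the whole point of keeping $\cvx[\cdot]$ inside the limit in \eqref{eq:NICDMD} is that the gap need not vanish; the achievability must therefore \emph{attain} the envelope's directional derivative via a genuine time-sharing of type classes whose scales $(\alpha_i,\beta_i)$ need not tend to zero, and your argument does not show that such a time-shared scheme achieves the perspective limit once the Sanov error is accounted for.
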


Since, in general,  $\underline{\Upsilon}_{\mathrm{LD}}$ and $\overline{\Upsilon}_{\mathrm{LD}}$
are not necessarily convex and concave, respectively, the operations
of taking the lower convex and upper concave envelopes in~\eqref{eq:NICDMD}
and \eqref{eq:NICDMD2} cannot be removed  as well. As a consequence,
for this case, \eqref{eq:NICDMD} and \eqref{eq:NICDMD2} cannot be
written as in the variational expressions that appear  on right-hand sides of~\eqref{eq:NICDmdexponent} and~\eqref{eq:NICDmdexponent-1}. 


\subsection{Gaussian Sources}
\label{subsec:NICD-Gaussian}

 We next consider  memoryless bivariate Gaussian sources with
correlation coefficient $\rho\in(-1,1)\setminus\{0\}$. For such sources, the NICD
problem was completely solved by \citet{borell1985geometric} (for the  symmetric cases in which $a=b$) and
\citet{mossel2015robust} (for the asymmetric cases) for all $(a,b)\in[0,1]^{2}$ and {\em
non-asymptotically}, i.e., for all   $n$. Let $\pi_{XY}$ be the bivariate Gaussian
distribution with mean $(0,0)$ and covariance matrix~$\bK$ given
in \eqref{eqn:cov_mat_rho}, where the correlation coefficient $\rho\in(-1,1)\setminus\{0\}$.
As usual, let $(X^{n},Y^{n})\sim \pi_{XY}^{n}$. 
\begin{theorem}[Borell's isoperimetric theorem]
\label{thm:borell_nicd} For any $n\ge 1$ and  $a,b\in[0,1]$, 
\begin{align}
\overline{\Gamma}^{(n)}(a,b)=\Lambda_{\rho}(a,b)\quad\mbox{and}\quad\underline{\Gamma}^{(n)}(a,b)=\Lambda_{-\rho}(a,b),
\end{align}
where the bivariate normal copula $\Lambda_{\rho}(\cdot,\cdot)$ is
defined in \eqref{eq:NICDcopula}. \end{theorem}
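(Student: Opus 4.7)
The plan is to prove this classical result---Borell's noise stability theorem for Gaussian space---by identifying half-spaces as the extremal sets and then establishing their optimality via a symmetrization or interpolation argument. First, I would verify the achievability bound by constructing half-spaces explicitly. Let $\mathcal{A} := \{x^n \in \mathbb{R}^n : x_1 \leq \Phi^{-1}(a)\}$ and $\mathcal{B} := \{y^n \in \mathbb{R}^n : y_1 \leq \Phi^{-1}(b)\}$. Since these sets depend only on the first coordinate and $(X_1, Y_1)$ is a centered bivariate Gaussian with correlation $\rho$, one has $\pi_X^n(\mathcal{A}) = a$, $\pi_Y^n(\mathcal{B}) = b$, and $\pi_{XY}^n(\mathcal{A} \times \mathcal{B}) = \Phi_\rho(\Phi^{-1}(a), \Phi^{-1}(b)) = \Lambda_\rho(a,b)$, showing $\overline{\Gamma}^{(n)}(a,b) \geq \Lambda_\rho(a,b)$. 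For the reverse part, using anti-parallel half-spaces $\mathcal{A} = \{x_1 \leq \Phi^{-1}(a)\}$ and $\mathcal{B} = \{y_1 \geq -\Phi^{-1}(b)\}$ yields $\pi_{XY}^n(\mathcal{A}\times\mathcal{B}) = \Lambda_{-\rho}(a,b)$, establishing $\underline{\Gamma}^{(n)}(a,b) \leq \Lambda_{-\rho}(a,b)$.

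For the converse, I would follow the \emph{Ehrhard symmetrization} approach originally used by Borell. Given any pair $(\mathcal{A},\mathcal{B})$ with prescribed Gaussian measures, the Ehrhard symmetrization in direction $e_i$ replaces each $(n-1)$-dimensional slice perpendicular to $e_i$ by a half-line of the same one-dimensional Gaussian measure. This operation preserves $\pi_X^n(\mathcal{A})$ and $\pi_Y^n(\mathcal{B})$. The key lemma---a two-set refinement of the Ehrhard inequality---asserts that applying the \emph{same} directional symmetrization to both $\mathcal{A}$ and $\mathcal{B}$ cannot decrease $\pi_{XY}^n(\mathcal{A}\times\mathcal{B})$ when $\rho>0$, and cannot increase it when the symmetrizations are oriented oppositely. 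Iterating over all coordinate directions and passing to a limit via compactness in the Gaussian $L^2$ topology drives $(\mathcal{A},\mathcal{B})$ towards a pair of parallel (resp.\ anti-parallel) half-spaces, yielding the matching bound.

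An alternative and more modern route, due to Mossel and Neeman, uses the Ornstein--Uhlenbeck semigroup $\{P_t\}_{t\geq 0}$ to interpolate between the indicator pair $(f,g) = (\mathbf{1}_{\mathcal{A}}, \mathbf{1}_{\mathcal{B}})$ and the extremal half-space pair. Define
\begin{equation}
F(t) := \mathbb{E}\big[J\big(P_t f(X),\, P_t g(Y)\big)\big],
\end{equation}
where $J(u,v) := \Phi_\rho(\Phi^{-1}(u), \Phi^{-1}(v))$ is the bivariate Gaussian copula. By It\^o calculus, the time derivative $F'(t)$ reduces to an expectation of a quadratic form in the Hessian $\nabla^2 J$ evaluated against the gradients of $P_t f$ and $P_t g$; the sign of this quadratic form is controlled by a determinantal identity satisfied precisely by $J$. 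Monotonicity of $F(t)$ then sandwiches $F(0) = \pi_{XY}^n(\mathcal{A}\times\mathcal{B})$ against $F(\infty) = J(a,b) = \Lambda_\rho(a,b)$.

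The main obstacle is the key monotonicity step---either verifying that Ehrhard symmetrization has the claimed effect on the joint probability, or equivalently checking the requisite semidefiniteness of $\nabla^2 J$ along the interpolation. The symmetric case $a=b$ was resolved by Borell via a sphere-lifting argument that reduces the problem to spherical isoperimetry, but the asymmetric case $a\neq b$ remained open for nearly three decades until Mossel and Neeman supplied the Hessian computation. I would adopt the Mossel--Neeman route since it treats both the symmetric and asymmetric cases uniformly, and the reverse statement then follows by applying the forward statement with $Y$ replaced by $-Y$, which flips the sign of $\rho$ and exchanges maximization with minimization via the complement identity $\pi_{XY}^n(\mathcal{A} \times \mathcal{B}^c) = \pi_X^n(\mathcal{A}) - \pi_{XY}^n(\mathcal{A} \times \mathcal{B})$.
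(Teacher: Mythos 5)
Your proposal is correct and follows essentially the same route as the paper: achievability via parallel (resp.\ anti-parallel) half-spaces, and the converse via the Mossel--Neeman Ornstein--Uhlenbeck interpolation with the functional $F(t)=\bbE[\Lambda_\rho(P_t f(X^n),P_t g(Y^n))]$ and a monotonicity argument, which is exactly the paper's $R_t$ in the proof of Theorem~\ref{thm:functional-borell}. The only cosmetic difference is that the paper first establishes the equivalence between the set form (Theorem~\ref{thm:borell_nicd}) and the functional form (Theorem~\ref{thm:functional-borell}) via a hypograph lifting and then proves the latter, whereas you apply the interpolation directly to the indicator pair---a benign shortcut that dispenses with the unneeded direction of the equivalence.
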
 
Moreover, it has been shown by \citet{mossel2015robust} that  
  the optimal subsets $(\calA,\calB)$
attaining $\overline{\Gamma}^{(n)}$ or $\underline{\Gamma}^{(n)}$
must be equal to parallel halfspaces (almost everywhere).
 
Specialized to the case of
$a=b=1/2$, this theorem implies that 
\begin{align}
\hspace{-.3in}\overline{\Gamma}^{(n)}\Big(\frac{1}{2},\frac{1}{2}\Big)=\frac{1}{2}-\frac{\arccos\rho}{2\pi}\quad\mbox{and}\quad\underline{\Gamma}^{(n)}\Big(\frac{1}{2},\frac{1}{2}\Big)=\frac{\arccos\rho}{2\pi}. \label{eq:half_half}
\end{align}
The optimal $(\calA,\calB)$ attaining $\overline{\Gamma}^{(n)}(1/2,1/2)$
 correspond to  a pair of    identical
halfspaces through the origin. In contrast, the optimal $(\calA,\calB)$
attaining $\underline{\Gamma}^{(n)}(1/2,1/2)$   correspond to a pair
of   complementary halfspaces through
the origin. 

Next, we provide a proof sketch of Theorem \ref{thm:borell_nicd} which is 
due to \citet{mossel2015robust}.  In fact,
they also proved the following equivalent form of Theorem
\ref{thm:borell_nicd}. 

\begin{theorem}\label{thm:functional-borell} For any $n\ge 1$,  any pair of measurable
functions $f,g:\bbR^{n}\to[0,1]$, and any $0<\rho<1$, 
\begin{equation}
\bbE\big[\Lambda_{\rho}(f(X^{n}),g(Y^{n}))\big]\le\Lambda_{\rho}\big(\bbE[f(X^{n})],\bbE[g(Y^{n})]\big).\label{eq:functional-borell}
\end{equation}
If $-1<\rho<0$,   the inequality in~\eqref{eq:functional-borell}
is reversed. \end{theorem}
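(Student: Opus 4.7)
The plan is to establish Theorem \ref{thm:functional-borell} by reducing it to the set-version Borell inequality in Theorem \ref{thm:borell_nicd} via a standard \emph{lifting trick}. First, I would record the easy direction: specializing $f = \bone_{\calA}$ and $g = \bone_{\calB}$ in \eqref{eq:functional-borell} immediately yields $\pi_{XY}^n(\calA \times \calB) \le \Lambda_\rho(a,b)$, so the functional form implies Theorem \ref{thm:borell_nicd}. The substantive content is the reverse implication.

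For the main (forward) direction with $\rho \in (0,1)$, introduce an auxiliary bivariate Gaussian pair $(U,V)$, independent of $(X^n,Y^n)$, with zero means, unit variances, and correlation $\rho$. Define the augmented sets in $\bbR^{n+1}$
\begin{equation}
\tilcalA := \{(x^n,u) \in \bbR^{n+1} : u \le \Phi^{-1}(f(x^n))\},
\end{equation}
and analogously $\tilcalB := \{(y^n,v) : v \le \Phi^{-1}(g(y^n))\}$, with the conventions $\Phi^{-1}(0)=-\infty$ and $\Phi^{-1}(1)=+\infty$. The next step is to verify two key identities by conditioning on $(X^n, Y^n)$ and using $\Pr(U \le \Phi^{-1}(t)) = t$ for $t \in [0,1]$: the marginal Gaussian measures satisfy $\gamma_{n+1}(\tilcalA) = \bbE[f(X^n)]$ and $\gamma_{n+1}(\tilcalB) = \bbE[g(Y^n)]$, while the joint probability satisfies
\begin{equation}
\Pr\big((X^n,U) \in \tilcalA,\; (Y^n,V) \in \tilcalB\big) = \bbE\big[\Lambda_\rho(f(X^n), g(Y^n))\big],
\end{equation}
where the last equality uses the definition $\Lambda_\rho(a,b) = \Pr(U \le \Phi^{-1}(a), V \le \Phi^{-1}(b))$ directly.

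The third step is to observe that the joint law of $((X^n,U),(Y^n,V))$ is precisely the $(n+1)$-fold product of a bivariate standard Gaussian with correlation $\rho$, since all cross-covariances between distinct coordinates vanish by independence while each matched coordinate pair has correlation $\rho$. Applying Theorem \ref{thm:borell_nicd} in dimension $n+1$ to the sets $\tilcalA$ and $\tilcalB$ gives $\gamma_{n+1,n+1}^{(\rho)}(\tilcalA \times \tilcalB) \le \Lambda_\rho(\gamma_{n+1}(\tilcalA), \gamma_{n+1}(\tilcalB))$, which upon substituting the identities above is exactly \eqref{eq:functional-borell}. For $\rho \in (-1,0)$ the same lifting works verbatim: the augmented pair has correlation $\rho < 0$, and Theorem \ref{thm:borell_nicd} in that regime furnishes the reversed inequality $\pi^{n+1}(\tilcalA \times \tilcalB) \ge \Lambda_\rho(\cdot,\cdot)$ (via the $\underline{\Gamma}$ statement with the appropriate sign of correlation), yielding the reversed functional inequality.

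The main obstacle will be bookkeeping rather than deep mathematics: one must argue that the level-set construction remains measurable and well-behaved at the degenerate values $f(x^n) \in \{0,1\}$, and one must be careful that the specific convention for $\Lambda_\rho$ at boundary arguments makes both identities hold without modification. A secondary subtlety is that Theorem \ref{thm:borell_nicd} is stated for arbitrary $a,b \in [0,1]$, so the lifted quantities $\gamma_{n+1}(\tilcalA), \gamma_{n+1}(\tilcalB)$ are automatically in $[0,1]$ and no further regularity on $f,g$ beyond measurability is needed. The heavy lifting—Borell's original isoperimetric inequality itself—is entirely absorbed into the already-cited Theorem \ref{thm:borell_nicd}, so the proof of Theorem \ref{thm:functional-borell} is essentially a one-dimensional lifting calculation.
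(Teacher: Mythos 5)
Your lifting construction is exactly the hypograph argument that the paper uses to establish the \emph{equivalence} between Theorem~\ref{thm:borell_nicd} and Theorem~\ref{thm:functional-borell}: the sets $\tilcalA$, $\tilcalB$ you build are precisely the hypographs of $\Phi^{-1}\circ f$ and $\Phi^{-1}\circ g$, and the two identities you verify (marginal measures equal $\bbE[f]$, $\bbE[g]$; joint probability equals $\bbE[\Lambda_\rho(f,g)]$) are the same ones the paper records. So that part is correct and well executed.

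The problem is with the logical direction in which the reduction is used. Within the paper's development, Theorem~\ref{thm:borell_nicd} is \emph{not} a black box: the paper explicitly says ``Hence, to prove Theorem~\ref{thm:borell_nicd}, it suffices to prove Theorem~\ref{thm:functional-borell}'' and then proceeds to prove the functional form directly. So the set-version theorem is \emph{derived from} the functional theorem, not the other way around; deducing the functional theorem from the set version, as you do, is therefore circular relative to the paper's argument. What is actually missing from your proposal is the substantive proof of the inequality itself: the paper's argument (following Mossel and Neeman) introduces the Ornstein--Uhlenbeck semigroup $P_t$ in~\eqref{eq:P_t}, interpolates between the two sides of~\eqref{eq:functional-borell} via $R_t := \bbE[\Lambda_\rho(P_t f, P_t g)]$ in~\eqref{eq:R_t-def} (so $R_0$ is the left-hand side and $R_\infty$ the right-hand side), and shows $\tfrac{\rmd R_t}{\rmd t}\ge 0$ for $0<\rho<1$ by the explicit derivative formula in Lemma~\ref{lem:diff-R_t}, which manifestly has the sign of $\rho$. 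None of this appears in your proposal. Your approach would be legitimate if you took Borell's original isoperimetric theorem as an external citation rather than as something the paper itself establishes; but that is not the path the paper takes, and without an independent proof of the set version your argument proves neither theorem.
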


To see that Theorem~\ref{thm:functional-borell} implies 
Theorem~\ref{thm:borell_nicd}, set $f=\bone_{\mathcal{A}}$ and $g=\bone_{\mathcal{B}}$
for two   sets $\mathcal{A},\mathcal{B}\subset\bbR^{n}$ such
that $\bbE[f(X^{n})]=a$ and $\bbE[g(Y^{n})]=b$ in Theorem~\ref{thm:functional-borell}.
Observe that $\Lambda_{\rho}(0,0)=\Lambda_{\rho}(1,0)=\Lambda_{\rho}(0,1)=0$,
and $\Lambda_{\rho}(1,1)=1$. Therefore, $\Lambda_{\rho}(f(X^{n}),g(Y^{n}))=\bone_{\mathcal{A}\times\mathcal{B}}(X^{n},Y^{n})$,
which implies that $\overline{\Gamma}^{(n)}(a,b)\le\Lambda_{\rho}(a,b)$.
Obviously,  by definition, $\overline{\Gamma}^{(n)}(a,b)\ge\Lambda_{\rho}(a,b)$ follows
by setting $\mathcal{A}$ and $\mathcal{B}$ to be two parallel halfspaces.
Hence, $\overline{\Gamma}^{(n)}(a,b)=\Lambda_{\rho}(a,b)$. 

We now argue that Theorem~\ref{thm:borell_nicd} implies Theorem~\ref{thm:functional-borell}. For this purpose, 
given $f,g:\bbR^{n}\to[0,1]$, define $\mathcal{A}$ and $\mathcal{B} $ (subsets of $\bbR^{n+1}$)
to be  the respective hypographs\footnote{The {\em hypograph} $\mathrm{hyp}(h)$ of a function $h:\calX\to\bbR$ is the set of points of $\calX\times\bbR$ lying on or below its graph, i.e., $\mathrm{hyp}(h) := \{(x,r)\in\calX\times\bbR: r\le h(x)\}.$ } of $\Phi^{-1}\circ f: \bbR^n\to\bbR$ and $\Phi^{-1}\circ g : \bbR^n\to\bbR$, where recall that $\Phi(\cdot)$ is the cumulative distribution function
of the standard Gaussian and $\Phi^{-1} : (0,1) \to \bbR$ is its inverse. It can be readily  checked that 
\begin{align}
 \bbE\big[\Lambda_{\rho}(f(X^{n}),g(Y^{n}))\big] &=\Pr\big(X_{n+1}\!\le\!\Phi^{-1}\circ f(X^{n}),Y_{n+1}\!\le\!\Phi^{-1}\circ g(Y^{n})\big) \nn\\*
 & =\pi_{XY}^{n+1}(\mathcal{A}\times\mathcal{B}),
\end{align}
where $(X^{n+1},Y^{n+1})\sim \pi_{XY}^{n+1}$. On the other hand, $\bbE [f(X^{n})]=\pi_{X}^{n+1}(\mathcal{A})$
and $\bbE [g(Y^{n})]=\pi_{Y}^{n+1}(\mathcal{B})$, and hence, the right-hand side of \eqref{eq:functional-borell}
satisfies 
\begin{align}
\Lambda_{\rho}\big(\bbE[f(X^{n})],\bbE[g(Y^{n})]\big)=\Lambda_{\rho}(\pi_{X}^{n+1}\big(\mathcal{A}),\pi_{Y}^{n+1}(\mathcal{B})\big).
\end{align}
Thus, Theorem~\ref{thm:borell_nicd} in $n+1$ dimensions implies Theorem~\ref{thm:functional-borell}
in $n$ dimensions.

Hence, to prove  Theorem~\ref{thm:borell_nicd}, it suffices to prove Theorem~\ref{thm:functional-borell}. In their proof of Theorem~\ref{thm:functional-borell},  Mossel
and Neeman~\cite{mossel2015robust} first
constructed an   Ornstein--Uhlenbeck semigroup, then defined $R_t$, 
an auxiliary function for this semigroup  that connects the two sides
of \eqref{eq:functional-borell} as limiting cases. Lastly, they showed
that $R_t$ is monotone. A similar idea was also used in \citet{bakry1996levy}.

\begin{proof}[Proof Sketch of Theorem \ref{thm:functional-borell}]  
For every $t\ge0$, define the  operator $P_{t}$ that acts on functions $f:\bbR^n\to [0,1]$ as
\begin{equation}
(P_{t}f)(x^{n}):=\int_{\bbR^{n}}f\big(\rme^{-t}\,x^{n}+\sqrt{1-\rme^{-2t}}\,y^{n}\big)\,\mathrm{d}\pi_{Y}^{n}(y^{n}).\label{eq:P_t}
\end{equation}
This operator is known as the \emph{Ornstein--Uhlenbeck semigroup operator}.
Note that $P_{t}f\!\to\! f$ pointwise  as $t\!\to\! 0$ and $P_{t}f\!\to\!\bbE [f]$ pointwise as~$t\!\to\!\infty$.

Let $f_{t}:=P_{t}f$ and $g_{t}:=P_{t}g$, and consider the quantity 
\begin{equation}
R_{t}:=\bbE\big[\Lambda_{\rho}(f_{t}(X^{n}),g_{t}(Y^{n}))\big].\label{eq:R_t-def}
\end{equation}
As $t\to0$, $R_{t}$ converges to the left-hand side of~\eqref{eq:functional-borell};
as $t\to\infty$, $R_{t}$ converges to the right-hand side of~\eqref{eq:functional-borell}.
Hence, to establish Theorem~\ref{thm:functional-borell}, it suffices
to prove that ${\mathrm{d}R_{t}}/{\mathrm{d}t}\ge0$ for all $t>0$.
This point can be checked by careful calculations, as shown in the
following lemma due to \citet{mossel2015robust}.

\begin{lemma}\label{lem:diff-R_t} 
The function $t\in [0,\infty)\mapsto R_t$, defined in \eqref{eq:R_t-def}, satisfies
\begin{equation}
\frac{\mathrm{d}R_{t}}{\mathrm{d}t}=\frac{\rho}{2\pi\sqrt{1-\rho^{2}}}\ \bbE\bigg[\exp\bigg(-\frac{v_{t}^{2}+w_{t}^{2}-2\rho v_{t}w_{t}}{2(1-\rho^{2})}\bigg)\bigg]\  \left\|\grad v_{t}-\grad w_{t}\right\|^{2},
\end{equation}
where $v_{t}:=\Phi^{-1}\circ f_{t}: \bbR^n\to\bbR$, $w_{t}:=\Phi^{-1}\circ g_{t}: \bbR^n\to\bbR$,
and $\grad$ denotes the gradient operator. Hence, the  derivative of $R_t$ for $t\ge0$ is nonnegative. \end{lemma}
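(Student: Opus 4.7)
The plan is to compute $\dot R_t := \mathrm{d}R_t/\mathrm{d}t$ by differentiating under the expectation, applying the Ornstein--Uhlenbeck evolution equation to the integrand, and then invoking Gaussian integration by parts to convert second-order derivatives into inner products of gradients whose coefficients are precisely the partial derivatives of $\Phi_\rho$.

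First I would set up notation. Since $f_t = \Phi(v_t)$ and $g_t = \Phi(w_t)$, we have $\Lambda_\rho(f_t,g_t) = \Phi_\rho(v_t,w_t)$, so $R_t = \bbE[\Phi_\rho(v_t(X^n),w_t(Y^n))]$. The key structural facts I would collect at the outset are: (i) the OU semigroup satisfies $\partial_t P_t f = \calL P_t f$, where $\calL = \Delta - x\cdot\grad$ is the OU generator; (ii) differentiating $\Phi(v_t) = f_t$ gives $\partial_t v_t = \calL f_t/\phi(v_t) = \calL v_t - v_t\|\grad v_t\|^2$ (using $\phi'(v) = -v\phi(v)$), and likewise for $w_t$; and (iii) by direct differentiation of $\Phi_\rho(v,w) = \int_{-\infty}^v\!\int_{-\infty}^w \phi_\rho(s,r)\,\mathrm{d}r\,\mathrm{d}s$, we have $\partial_1\Phi_\rho(v,w) = \phi(v)\,\Phi\bigl(\frac{w-\rho v}{\sqrt{1-\rho^2}}\bigr)$, $\partial_2\Phi_\rho(v,w) = \phi(w)\,\Phi\bigl(\frac{v-\rho w}{\sqrt{1-\rho^2}}\bigr)$, and most importantly $\partial_1\partial_2\Phi_\rho(v,w) = \phi_\rho(v,w)$, which is exactly the prefactor appearing in the claimed formula.

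Next I would differentiate: applying the chain rule and (ii) yields
\begin{align}
\dot R_t &= \bbE\bigl[\partial_1\Phi_\rho(v_t,w_t)\,\partial_t v_t + \partial_2\Phi_\rho(v_t,w_t)\,\partial_t w_t\bigr] \nn\\
&= \bbE\bigl[\tfrac{\partial_1\Phi_\rho(v_t,w_t)}{\phi(v_t)}\,\calL f_t + \tfrac{\partial_2\Phi_\rho(v_t,w_t)}{\phi(w_t)}\,\calL g_t\bigr].
\end{align}
Now comes the central step: Gaussian integration by parts. Writing the joint density as $\pi_{XY}^n(x,y) = \pi_X^n(x)\,\pi_{Y|X}^n(y|x)$, and using the fact that $\calL$ is self-adjoint on $L^2(\pi_X^n)$ with $\bbE_{\pi_X^n}[h\,\calL f] = -\bbE_{\pi_X^n}[\grad h\cdot\grad f]$, I would transfer the $X$-derivatives in $\calL f_t$ onto the other factor $h(x,y) := \partial_1\Phi_\rho(v_t(x),w_t(y))/\phi(v_t(x))$ (and symmetrically for the $Y$-term), using Fubini to first integrate over $X$ conditionally on $Y$. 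The conditional measure is still Gaussian, so the identity applies verbatim; boundary terms vanish by the Gaussian decay of all the factors.

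After the integration by parts the integrand becomes $-\grad_X h \cdot \grad_X f_t$ plus the symmetric $Y$-term. Expanding $\grad_X h$ using the chain rule produces, among other pieces, the decisive mixed contribution $\phi_\rho(v_t,w_t)\,\grad v_t\cdot\grad w_t$ (via $\partial_1\partial_2\Phi_\rho = \phi_\rho$) and the diagonal contributions $-\phi_\rho(v_t,w_t)\,\|\grad v_t\|^2$ and $-\phi_\rho(v_t,w_t)\,\|\grad w_t\|^2$ (these come from careful bookkeeping of $\partial_1^2\Phi_\rho/\phi(v)$ and analogous derivatives, combined with the $-v_t\|\grad v_t\|^2$ piece in $\partial_t v_t$). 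All terms that do not involve $\phi_\rho$ cancel, leaving
\begin{equation}
\dot R_t = \rho\,\bbE\Bigl[\phi_\rho(v_t,w_t)\,\bigl(\|\grad v_t\|^2 + \|\grad w_t\|^2 - 2\,\grad v_t\cdot\grad w_t\bigr)\Bigr],
\end{equation}
and substituting the explicit formula for $\phi_\rho$ gives the claimed identity. Nonnegativity for $\rho>0$ follows immediately from $\|\grad v_t - \grad w_t\|^2\ge 0$.

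The main obstacle I anticipate is the algebraic collapse of many seemingly unrelated terms into the clean square $\|\grad v_t-\grad w_t\|^2$: one must track the diagonal pieces coming from $\partial_1^2\Phi_\rho$ and $\partial_2^2\Phi_\rho$, the contributions of $-v_t\|\grad v_t\|^2$ and $-w_t\|\grad w_t\|^2$ in $\partial_t v_t,\partial_t w_t$, and the drift terms $-x\cdot\grad$ and $-y\cdot\grad$ from $\calL$, and verify that all pieces not proportional to $\phi_\rho\,\rho\,(\grad v_t-\grad w_t)$ cancel. The identity $\partial_1^2\Phi_\rho(v,w) = -v\,\phi(v)\Phi(\tfrac{w-\rho v}{\sqrt{1-\rho^2}}) - \tfrac{\rho}{\sqrt{1-\rho^2}}\phi_\rho(v,w)\cdot\text{(correction)}$ and its symmetric counterpart are what drives this cancellation, and verifying them is the bookkeeping-heavy core of the calculation.
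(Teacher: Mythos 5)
Your overall strategy---differentiate $R_t$ under the expectation, apply the Ornstein--Uhlenbeck evolution $\partial_t P_t f=\calL P_t f$, and integrate by parts to surface $\|\grad v_t-\grad w_t\|^2$---is the right one and is essentially what Mossel and Neeman do. (Note that the paper itself only states the lemma and defers the proof to Mossel and Neeman, so you are reconstructing the original argument.) However, the integration-by-parts step as you describe it cannot generate the crucial cross term $-2\rho\,\bbE[\phi_\rho\,\grad v_t\cdot\grad w_t]$ (where $\phi_\rho$ denotes the bivariate standard Gaussian density with correlation $\rho$), and without that term the square does not form.

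Here is the gap. You claim that integrating over $X$ conditionally on $Y$ lets you apply $\bbE_{\pi_X^n}[h\,\calL f]=-\bbE_{\pi_X^n}[\grad h\cdot\grad f]$ ``verbatim'' because the conditional measure is still Gaussian. This is false: under $X\mid Y=y$ the distribution is $N(\rho y,(1-\rho^2)I)$, whose generator is $(1-\rho^2)\Delta-(x-\rho y)\cdot\grad$, \emph{not} $\calL=\Delta-x\cdot\grad$, so $\calL$ is not self-adjoint against the conditional measure. Moreover, with your $h(x,y)=\partial_1\Phi_\rho(v_t(x),w_t(y))/\phi(v_t(x))$, the $x$-gradient $\grad_x h$ contains only $\partial_1^2\Phi_\rho$, because $w_t(y)$ is independent of $x$; no $\partial_1\partial_2\Phi_\rho$ factor can arise from ``expanding $\grad_X h$ using the chain rule'', contrary to what you assert. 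If the naive self-adjointness held, you would obtain only the diagonal pieces $\rho\,\bbE[\phi_\rho\|\grad v_t\|^2]+\rho\,\bbE[\phi_\rho\|\grad w_t\|^2]$, which contradicts the lemma.

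The fix is to integrate by parts against the \emph{marginal} of $X$ (standard Gaussian, where $\calL$ genuinely is self-adjoint): $\bbE[h\,\calL_x f_t]=-\bbE[\grad_x\bbE[h\mid X]\cdot\grad f_t]$. Differentiating the conditional expectation in $x$ gives $\bbE[\grad_x h\mid X]$ plus a drift term from $\grad_x\pi_{Y|X}^n(y\mid x)=\tfrac{\rho(y-\rho x)}{1-\rho^2}\,\pi_{Y|X}^n(y\mid x)$; Stein's identity in $y$ converts this drift into $\rho\,\bbE[\grad_y h\mid X]$. It is precisely that $\grad_y$ acting on $w_t(y)$ through $\partial_1\Phi_\rho$ which produces $\partial_1\partial_2\Phi_\rho=\phi_\rho$ and hence $-\rho\,\bbE[\phi_\rho\,\grad v_t\cdot\grad w_t]$ from each of the $X$- and $Y$-blocks. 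Combined with the clean identity $\partial_1^2\Phi_\rho(v,w)+v\,\partial_1\Phi_\rho(v,w)=-\rho\,\phi_\rho(v,w)$ (the unspecified ``correction'' in your last paragraph is $\sqrt{1-\rho^2}$, which cancels), the diagonal and cross terms assemble into $\tfrac{\mathrm{d}R_t}{\mathrm{d}t}=\rho\,\bbE[\phi_\rho(v_t,w_t)\,\|\grad v_t-\grad w_t\|^2]$, and nonnegativity follows for $\rho>0$.
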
  

This completes the proof sketch of Theorem \ref{thm:functional-borell}. \end{proof}
\global\long\def\cvx{\bbL}%
\global\long\def\cve{\bbU}%
 

\chapter{$q$-Stability} \label{ch:Stability} 

In Section~\ref{ch:NICD}, we discussed the $2$-user
NICD problem. In this section, we extend the NICD problem to the multi-user
case, and consider two versions of these extensions. In the {\em symmetric}
version, we maximize the agreement probability of the random bits
generated individually by the users. In   the {\em asymmetric} version,
we maximize the joint probability that all the random bits are equal to $1$.
These two maximization problems are equivalent in the $2$-user setting (see discussion following~\eqref{eqn:jp_reverse}),
but  are {\em not} equivalent in the setting involving $3$ or more users. This distinction results in the upcoming set of problems being significantly more challenging,  but they provide more insight into the NICD and related problems. 


Indeed, these extensions 
have inspired researchers to define a more
general concept known as the {\em $q$-stability}. This is done by generalizing
the number of users in the NICD problem from an integer $k$ to an arbitrary real number $q\ge 1$. 
 The \emph{max $q$-stability} problem
concerns the identification of Boolean functions that most ``stable''---measured in terms of the $q$-stability---under the action
of a noise operator. 
Such a problem not only significantly generalizes the $2$-user NICD problem to
a version parametrized by an arbitrary real number $q\ge1$, but   more importantly, it seamlessly connects to several
interesting contemporary conjectures in information theory and discrete probability, including the  Mossel--O'Donnell conjecture \cite{mossel2005coin}, the Courtade--Kumar
conjecture \cite{courtade2014boolean}, and the Li--M\'edard conjecture
\cite{li2021boolean}. Hence, the study of $q$-stability provides us a comprehensive and unified understanding
of these  conjectures. 

Similar to Section~\ref{ch:NICD}, in this section, we focus mainly on the doubly symmetric binary
source (DSBS) with  correlation coefficient $\rho\in(-1,1)$.
In Section~\ref{sec:nicd_multi}, we formulate the multi-user NICD
problem for the DSBS. We define the  asymmetric and symmetric forward joint
probabilities, and also generalize them to various max $q$-stabilities by
relaxing the number of users to an arbitrary real number $q\ge1$.
In Section~\ref{sec:conj}, we introduce several important conjectures
concerning the max $q$-stability problem for the case in that the  Boolean functions in question are {\em balanced}. These include the  Mossel--O'Donnell, Courtade--Kumar, and Li--M\'edard
conjectures. 
In Section~\ref{sec:extreme},
we describe resolutions for the conjectures in the extreme cases in which  the correlation
coefficient $\rho\downarrow 0$ or $\rho\uparrow 1$. Interestingly, in these two extreme
cases, the conjectures are characterized by  the classic {\em edge-isoperimetric inequality}
and the {\em maximal degree-$1$ Fourier weight}. Hence, related concepts in
discrete geometry, e.g., influences and edge boundaries, will also be introduced. In
Section~\ref{sec:balanced}, we describe recent progress on partial resolutions of these conjectures. In Section~\ref{sec:MD-and-LD},
we introduce the solutions to the max $q$-stability problem in the moderate
 and large deviations regimes. Finally, in Section~\ref{sec:stab_arb},
we discuss known results on the max $q$-stability problem for sources beyond the DSBS including  bivariate Gaussian sources.

\section{The Multi-User NICD Problem and $q$-Stability}
\label{sec:nicd_multi}

\subsection{Formulation}
\label{sec:form_q_stability}

\begin{figure}
\centering \tikzset{every picture/.style={line width=0.75pt}}
\begin{tikzpicture}[x=0.75pt,y=0.75pt,yscale=-1,xscale=1,scale=0.9] 
\draw   (1204.01,126) .. controls (1204.01,119.93) and (1208.93,115.01) .. (1215,115.01) .. controls (1221.08,115.01) and (1226,119.93) .. (1226,126) .. controls (1226,132.08) and (1221.08,137) .. (1215,137) .. controls (1208.93,137) and (1204.01,132.08) .. (1204.01,126) -- cycle ; 
\draw   (1279.01,199) .. controls (1279.01,192.93) and (1283.93,188.01) .. (1290,188.01) .. controls (1296.08,188.01) and (1301,192.93) .. (1301,199) .. controls (1301,205.08) and (1296.08,210) .. (1290,210) .. controls (1283.93,210) and (1279.01,205.08) .. (1279.01,199) -- cycle ; 
\draw   (1180.01,199) .. controls (1180.01,192.93) and (1184.93,188.01) .. (1191,188.01) .. controls (1197.08,188.01) and (1202,192.93) .. (1202,199) .. controls (1202,205.08) and (1197.08,210) .. (1191,210) .. controls (1184.93,210) and (1180.01,205.08) .. (1180.01,199) -- cycle ; 
\draw   (1130.01,199) .. controls (1130.01,192.93) and (1134.93,188.01) .. (1141,188.01) .. controls (1147.08,188.01) and (1152,192.93) .. (1152,199) .. controls (1152,205.08) and (1147.08,210) .. (1141,210) .. controls (1134.93,210) and (1130.01,205.08) .. (1130.01,199) -- cycle ; 
\draw    (1215,137) -- (1142.65,186.87) ;
\draw [shift={(1141,188.01)}, rotate = 325.41999999999996] [color={rgb, 255:red, 0; green, 0; blue, 0 }  ][line width=0.75]    (10.93,-3.29) .. controls (6.95,-1.4) and (3.31,-0.3) .. (0,0) .. controls (3.31,0.3) and (6.95,1.4) .. (10.93,3.29)   ; 
\draw    (1215,137) -- (1191.86,186.2) ;
\draw [shift={(1191,188.01)}, rotate = 295.2] [color={rgb, 255:red, 0; green, 0; blue, 0 }  ][line width=0.75]    (10.93,-3.29) .. controls (6.95,-1.4) and (3.31,-0.3) .. (0,0) .. controls (3.31,0.3) and (6.95,1.4) .. (10.93,3.29)   ; 
\draw    (1215,137) -- (1288.35,186.88) ;
\draw [shift={(1290,188.01)}, rotate = 214.22] [color={rgb, 255:red, 0; green, 0; blue, 0 }  ][line width=0.75]    (10.93,-3.29) .. controls (6.95,-1.4) and (3.31,-0.3) .. (0,0) .. controls (3.31,0.3) and (6.95,1.4) .. (10.93,3.29)   ; 
\draw    (1141,210) -- (1141,265.2) ;
\draw [shift={(1141,267.2)}, rotate = 270] [color={rgb, 255:red, 0; green, 0; blue, 0 }  ][line width=0.75]    (10.93,-3.29) .. controls (6.95,-1.4) and (3.31,-0.3) .. (0,0) .. controls (3.31,0.3) and (6.95,1.4) .. (10.93,3.29)   ; 
\draw    (1191,210) -- (1191,265.2) ;
\draw [shift={(1191,267.2)}, rotate = 270] [color={rgb, 255:red, 0; green, 0; blue, 0 }  ][line width=0.75]    (10.93,-3.29) .. controls (6.95,-1.4) and (3.31,-0.3) .. (0,0) .. controls (3.31,0.3) and (6.95,1.4) .. (10.93,3.29)   ; 
\draw    (1290,210) -- (1290,265.2) ;
\draw [shift={(1290,267.2)}, rotate = 270] [color={rgb, 255:red, 0; green, 0; blue, 0 }  ][line width=0.75]    (10.93,-3.29) .. controls (6.95,-1.4) and (3.31,-0.3) .. (0,0) .. controls (3.31,0.3) and (6.95,1.4) .. (10.93,3.29)   ;
\draw (1203,91) node [anchor=north west][inner sep=0.75pt]    {$Y^n\sim \mathrm{Bern}(\frac{1}{2})^{n}$}; 
\draw (1253,139) node [anchor=north west][inner sep=0.75pt]    {$\text{Independent} \,\, \mathrm{BSC}\Big( \frac{1-\rho}{2}\Big)^{n}$}; 
\draw (1110,210) node [anchor=north west][inner sep=0.75pt]    {$X^n_{1}$}; 
\draw (1160,210) node [anchor=north west][inner sep=0.75pt]    {$X^n_{2}$}; 
\draw (1260,210) node [anchor=north west][inner sep=0.75pt]    {$X^n_{k}$}; 
\draw (1129,272) node [anchor=north west][inner sep=0.75pt]    {$U_{1}$}; 
\draw (1179,272) node [anchor=north west][inner sep=0.75pt]    {$U_{2}$}; 
\draw (1280,272) node [anchor=north west][inner sep=0.75pt]    {$U_{k}$}; 
\draw (1222,185) node [anchor=north west][inner sep=0.75pt]   [align=left] {......}; 
\draw (1227,267) node [anchor=north west][inner sep=0.75pt]   [align=left] {......}; 
\draw (1220,335) node [anchor=north west][inner sep=0.75pt]    {$\mathrm{max} \ \Pr( U_{1} =U_{2} =\ldots =U_{k} =1)$}; 
\draw (1091.78,298.67) node [anchor=north west][inner sep=0.75pt]    {$\sim \mathrm{Bern}( a)$}; 
\draw (1175.78,297.67) node [anchor=north west][inner sep=0.75pt]    {$\sim \mathrm{Bern}( a)$}; 
\draw (1270.78,297.67) node [anchor=north west][inner sep=0.75pt]    {$\sim \mathrm{Bern}( a)$}; 
\draw (1062.73,334.91) node [anchor=north west][inner sep=0.75pt]   [align=left] {Asymmetric Version:}; 
\draw (1220,365) node [anchor=north west][inner sep=0.75pt]    {$\mathrm{max} \ \Pr( U_{1} =U_{2} =\ldots=U_{k})$}; 
\draw (1072.73,365.91) node [anchor=north west][inner sep=0.75pt]   [align=left] {Symmetric Version:};
\end{tikzpicture} \caption{The Non-Interactive Correlation Distillation problem with $k$ users}
\label{fig:NICD-with-k} 
\end{figure}

\enlargethispage{\baselineskip}
Before formally introducing the $k$-user NICD problem, we first introduce
a class of  Boolean functions, known as \emph{majority functions}.
For an odd number $m \in [ n]$, let $\mathrm{Maj}_{m}:\{0,1\}^{n}\to\{0,1\}$
be the majority function on the first $m$ bits which is given by  $\mathrm{Maj}_{m}(x^{n}):=\bone\left\{ \sum_{i=1}^{m}x_{i}\ge m/2\right\} $
for each $x^{n}\in\{0,1\}^{n}$. Then, clearly, $\mathrm{Maj}_{1}$ is
a dictator function, and $\mathrm{Maj}_{n}$ is the indicator of the Hamming
ball $\mathbb{B}_{n/2}(1^{n})$ (as introduced in Section~\ref{sec:ball}). Hence,   majority functions are
generalizations of dictator functions and indicators of Hamming balls. Furthermore,
we say that a Boolean function $f:\{0,1\}^{n}\to\{0,1\}$ is \emph{anti-symmetric} (or \emph{odd})
if 
\begin{equation}
f(x^{n})+f(\barx^{n})=1 \quad\mbox{for all} \;\, x^{n}\in\{0,1\}^{n}, \label{eqn:asymmetric}
\end{equation}
where $\barx^n := 1^n-x^n$ is the bitwise negation of $x^n$. 
Equivalently, for an anti-symmetric Boolean
function $f$, $\supp(f)^{\rmc}=1^{n}-\supp(f)$, where $1^{n}-\calA:=\{1^{n}-x^{n}:x^{n}\in \calA\}$
for any set ${\cal A}\subset\{0,1\}^{n}$. By definition, for any
odd $m\in[n]$, the majority function $\mathrm{Maj}_{m}$ is
anti-symmetric. 

\enlargethispage{\baselineskip}
The $k$-user NICD problem, which is illustrated in Fig.~\ref{fig:NICD-with-k},  was   investigated by \citet{mossel2005coin}
for the symmetric version, and by \citet{li2021boolean} for the asymmetric
version. 
There are $k$ correlated memoryless sources $X_{1},X_{2},\ldots,X_{k}$   generated from a common
memoryless Bernoulli source $Y\sim\mathrm{Bern}(\frac{1}{2})$
through~$k$ independent binary symmetric channels  with crossover
probability $p=(1-\rho)/2$; hence, $0<\rho<1$ is the correlation coefficient between $X_{j,i}$ and $Y_i$ for all $j\in [k]$ and $i\in [n]$. 
A Boolean function $f_{i} : \{0,1\}^n\to\{0,1\}$
is applied to each source sequence\footnote{Here, we use the notation $X_{i}^{n}$ to denote the $i^{\mathrm{th}}$
(out of $k$) length-$n$ correlated source sequences instead of the random
vector $(X_{i},X_{i+1},\ldots,X_{n})$.} $X_{i}^{n}$ to generate a random bit $U_{i}=f_{i}(X_{i}^{n})$.
\begin{definition}
For a dyadic rational $a=M/2^{n} \in [0,1]$ (in which $M\in\{0,1,\ldots,2^{n}\}$),
define the {\em forward joint probability at mean $a$} as 
\begin{align}
\hspace{-.2in}\Gamma_{\rho}^{(k)}(a) & :=\max_{\substack{\textrm{Boolean }f_{i},1\le i\le k:\\
\Pr(f_{i}(X_{i}^{n})=1)=a
}
}\Pr\big(f_{1}(X_{1}^{n})=\ldots=f_{k}(X_{k}^{n})=1\big).\label{eq:for_jp}
\end{align}
\end{definition}
Since  we do not consider the reverse counterpart
of the forward joint probability in \eqref{eq:for_jp} throughout this section, we omit the overline on~$\Gamma$ (cf.\ the notation $\overline{\Gamma}^{(n)}$ used for the forward joint
probability in~\eqref{eqn:jp_forward}) but we make the number of users $k$ and the correlation coefficient $\rho$ explicit in the notation. To avoid notational overload, we also omit the superscript $n$ that indexes the blocklength. Table~\ref{tab:notation_stability} lists  commonly encountered operational quantities in this section.

\begin{table}[!ht]
\caption{Table of commonly used operational quantities in this section}
\label{tab:notation_stability}
\centering
\begin{tabular}{|c|c|c|}
\hline
Name          & Symbol              & Definition(s)             \\ \hline\hline
Forward joint probability  at $a$    & $\Gamma_{\rho}^{(k)}(a)$  & \eqref{eq:for_jp}, \eqref{eqn:asymp_max_k} \\ \hline
$q$-stability of $f$ & $\bS_\rho^{(q)}[f] $ & \eqref{eq:q-stability} \\ \hline
Asymmetric max $q$-stability    at $a$          &       $\Gamma_{\rho}^{(q)}(a)$              &  \eqref{eqn:asymp_max_q_0}\\ \hline
Symmetric max $q$-stability  at $a$    &$ \breve{\Gamma}_{\rho}^{(q)}(a)$ &  \eqref{eqn:sym_max_q} \\ \hline
Symmetric $q$-stability of $f$ & $\breve{\bS}_\rho^{(q)}[f] $ & \eqref{eqn:sym_q_stab_f} \\ \hline
Symmetric forward joint probability  at $a$    & $\breve{\Gamma}_{\rho}^{(k)}(a)$ &  \eqref{eqn:sym_max_k} \\ \hline
$\Phi$-stability  of $f$    & $\bS_\rho^{(\Phi)}[f] $ &  \eqref{eqn:Phi_stab} \\ \hline
$\Phi$-asymmetric max $q$-stability  at $a$    & $\Pi_\rho^{(q)}(a) $ &  \eqref{eq:-3}\\ \hline
$\Phi$-symmetric max $q$-stability at $a$     & $\breve{\Pi}_\rho^{(q)}(a) $ &  \eqref{eq:-4}\\ \hline
LD exponent & ${\Upsilon}_{q,\mathrm{LD}}^{(n)}(\alpha) $  & \eqref{eq:NICD-FLD} \\ \hline
MD exponent & ${\Upsilon}_{q,\mathrm{MD}}^{(n)}(\alpha) $  &  \eqref{eq:NICD-RMD} \\ \hline
\end{tabular}
\end{table}

It clearly holds that every pair   $(X_{j}^{n},X_{\ell}^{n})$ with  $j\neq \ell$  is  a source sequence generated by a   DSBS with correlation coefficient $\rho^2$ (because $X_j -Y - X_\ell$). This implies that $\Gamma_{\rho}^{(2)}(a)$ corresponds to the forward joint probability  defined in~\eqref{eqn:jp_forward} for the DSBS with correlation coefficient $\rho^2$. 
%

Due to the apparent symmetry of the problem, one may naturally wonder whether the $k$ functions $f_1,\ldots, f_k$ that attain the forward joint probability are necessarily identical. This is positively confirmed in the following proposition which can be proved using either the idea  in  \cite[Proposition~3]{mossel2005coin} or \cite{li2021boolean}. We provide a self-contained proof.

\begin{proposition} \label{prop:identity2-1} Let $\calF $ be any class of Boolean functions. Let  $k,n\ge 1$ and $\rho\in (0,1)$.
Every tuple of functions $(f_{1},\ldots,f_{k}) \in \calF^k$ that maximizes $\Pr\big(f_{1}(X_{1}^{n})=\ldots=f_{k}(X_{k}^{n})=1\big)$
satisfies $f_{1}=f_{2}=\ldots=f_{k}$. \end{proposition}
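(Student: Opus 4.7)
The plan is to exploit the Markov structure $X_j - Y - X_\ell$ to rewrite the objective in terms of the noise operator $T_\rho$ defined in \eqref{eqn:noise_op}. Conditioned on $Y^n$, the vectors $X_1^n, \ldots, X_k^n$ are mutually independent, and each $(X_i, Y)$ is a DSBS with correlation coefficient $\rho$. Therefore
\begin{equation}
\Pr\big(f_1(X_1^n)=\ldots=f_k(X_k^n)=1\big) = \bbE\bigg[\prod_{i=1}^k \bbE\big[f_i(X_i^n)\,\big|\, Y^n\big]\bigg] = \bbE\bigg[\prod_{i=1}^k (T_\rho f_i)(Y^n)\bigg].
\end{equation}
Because each $f_i$ is Boolean, $T_\rho f_i \ge 0$ pointwise, which sets the stage for a convexity-based swap.

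\textbf{The key AM--GM swap.} The plan is to apply the weighted AM--GM inequality pointwise. For non-negative reals $a_1,\ldots,a_k$,
\begin{equation}
\prod_{i=1}^k a_i \;\le\; \frac{1}{k}\sum_{i=1}^k a_i^k,
\end{equation}
with equality if and only if $a_1=\cdots=a_k$. Applying this with $a_i = (T_\rho f_i)(y^n)$ and taking $\bbE$ over $Y^n\sim \mathrm{Unif}\{0,1\}^n$,
\begin{equation}
\bbE\bigg[\prod_{i=1}^k (T_\rho f_i)(Y^n)\bigg] \;\le\; \frac{1}{k}\sum_{i=1}^k \bbE\big[(T_\rho f_i)^k\big] \;\le\; \max_{i\in[k]} \bbE\big[(T_\rho f_i)^k\big].
\end{equation}
Let $i^*\in\arg\max_i\bbE[(T_\rho f_i)^k]$. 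The diagonal tuple $(f_{i^*},\ldots,f_{i^*})\in\calF^k$ is a competitor whose objective equals $\bbE[(T_\rho f_{i^*})^k]$, which is at least the value of the original tuple.

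\textbf{Forcing equality and inverting $T_\rho$.} The plan is now to use maximality to force equality in every step. Since $(f_1,\ldots,f_k)$ is a maximizer and $(f_{i^*},\ldots,f_{i^*})\in\calF^k$, both inequalities above must be equalities. In particular, AM--GM is tight in $\bbE$-sense, and because $Y^n$ has full support on $\{0,1\}^n$, it must be tight at every $y^n$. The equality clause of AM--GM then yields $(T_\rho f_1)(y^n) = \cdots = (T_\rho f_k)(y^n)$ for all $y^n\in\{0,1\}^n$. To conclude, one uses that $T_\rho$ is invertible for $\rho\in(0,1)$: in the Fourier basis $\{\chi_S\}_{S\subset[n]}$ one has $T_\rho\chi_S = \rho^{|S|}\chi_S$, and all eigenvalues $\rho^{|S|}$ are strictly positive. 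Hence $T_\rho f_i = T_\rho f_j$ implies $f_i = f_j$, giving $f_1=\cdots=f_k$.

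\textbf{Main obstacle.} The argument is short once the right interpolation is found, so the only real challenge is conceptual: identifying that the diagonal objective $\bbE[(T_\rho f)^k]$ dominates the off-diagonal objective $\bbE[\prod_i T_\rho f_i]$ via the AM--GM inequality $\prod a_i \le \tfrac{1}{k}\sum a_i^k$ (the natural $k$-variable generalization of the $k=2$ Cauchy--Schwarz argument used in \eqref{eqn:cs_iq}). One must also verify the mild but essential point that $T_\rho$ is injective for $\rho\in(0,1)$, otherwise the equality $T_\rho f_i = T_\rho f_j$ would not imply $f_i = f_j$; this is where the hypothesis $\rho\in(0,1)$ is used.
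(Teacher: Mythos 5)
Your proof is correct, and it takes a genuinely different route from the paper's. The paper enumerates $\calF=\{g_j\}_{j\in[M]}$, encodes a tuple by the counting distribution $(p_1,\ldots,p_M)$ with $kp_j$ users choosing $g_j$, observes that $(p_1,\ldots,p_M)\mapsto\bbE_{Y^n}\big[\prod_j(T_\rho g_j(Y^n))^{kp_j}\big]$ is convex (being a mixture of log-affine functions of $p$), and concludes that the maximum over the simplex sits at a vertex; it then \emph{defers} the necessity claim---that every maximizer is diagonal---to \citet{mossel2005coin}. You instead apply the pointwise inequality $\prod_i a_i\le\frac{1}{k}\sum_i a_i^k$ with $a_i=(T_\rho f_i)(y^n)$ and take expectations; this reaches the same conclusion without enumerating $\calF$ and, more importantly, gives you for free the equality characterization needed to prove the actual statement: tightness of AM--GM at every $y^n$ (using full support of $Y^n$) forces $T_\rho f_1=\cdots=T_\rho f_k$, and injectivity of $T_\rho$ for $\rho\in(0,1)$ (eigenvalues $\rho^{|S|}>0$ in the Fourier basis) yields $f_1=\cdots=f_k$. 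So your argument is not only different but also self-contained on the uniqueness point, whereas the paper's sketch only establishes existence of a diagonal maximizer. The one thing worth flagging is that your injectivity step is where $\rho\in(0,1)$ (rather than $\rho=0$) is genuinely needed, which you correctly identify; the paper's convexity sketch never makes this explicit because it outsources the necessity claim.
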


\begin{proof} Since $\calF$ is finite, we may enumerate its elements as  $\mathcal{F}=\{g_{j}:j\in[M]\}$ where $M\ge2$ to avoid the trivial case in which $M=1$. 
Suppose that among the $k$
users, $g_{j}$ is used by $kp_{j}$ of them. Then clearly, $\{p_{j}:j\in[M]\}$
forms a distribution on $\mathcal{F}$ or, isomorphically,  on $[M]$. On the other hand, the
joint probability induced by this scheme is 
\begin{align}
\hspace{-.2in} \Pr\big(f_{1}(X_{1}^{n})=\ldots=f_{k}(X_{k}^{n})=1\big)=\mathbb{E}_{Y^{n}}\bigg[\prod_{j=1}^{M}(T_{\rho}g_{j}(Y^{n}))^{kp_{j}}\bigg],\label{eq:-5-1}
\end{align}
where $T_\rho$ is the noise operator defined in~\eqref{eqn:noise_op}. 
On the other hand, given $a_{1},\ldots,a_{M}>0$, the map $(p_{1},\ldots,p_{M}) \in \calP([M]) \mapsto\prod_{j=1}^{M}a_{j}^{p_{j}}$
is  convex. Hence, the expression in~\eqref{eq:-5-1} is
{\em convex} in $(p_{1},\ldots,p_{M})$. {\em Maximizing}~\eqref{eq:-5-1} over $(p_{1},\ldots,p_{M})$
on the probability simplex $\calP([M])$, we see that the maximum is attained
at a vertex of $\calP([M])$. This in turn implies that
the maximum of $\Pr\big(f_{1}(X_{1}^{n})=\ldots=f_{k}(X_{k}^{n})=1\big)$
over all $(f_{1},\ldots,f_{k})\in \mathcal{F}^k$  is attained by some $(f_{1},\ldots,f_{k})$
such that $f_{1}=f_{2}=\ldots=f_{k}$. The necessity of the identity
of Boolean functions in attaining this maximum can also be  verified;
see \citet{mossel2005coin}.  \end{proof}

By particularizing $\mathcal{F}$ in Proposition \ref{prop:identity2} to be the set of 
Boolean functions with mean $a$, any tuple of $k$ functions $(f_{1},\ldots,f_{k})$
that attains the forward joint probability necessarily satisfies $f_{1}=f_{2}=\ldots=f_{k}$.
This observation draws our attention to the following related quantity known as the 
{\em $q$-stability} \cite{eldan2015two,li2021boolean}. 
\begin{definition} \label{def:qstab}
For any $q\in[1,\infty)$ and a Boolean function $f: \{0,1\}^n\to\{0,1\}$, the {\em $q$-stability}
of $f$ is defined as 
\begin{align}
\mathbf{S}^{(q)}_{\rho}[f]:=\mathbb{E}_{Y^{n}}\big[(T_{\rho}f(Y^{n}))^{q}\big].\label{eq:q-stability}
\end{align}
\end{definition}
For $q=2$, the $q$-stability reduces to the {\em correlation} $\bbE[f(X^n)f(\hat{X}^n)]$, or equivalently, the {\em joint probability} $\Pr(f(X^n)=f(\hat{X}^n)=1)$, where $(X^n,\hat{X}^n)$ is a source sequence of the DSBS with correlation coefficient~$\rho^2$. Hence,    $\bbE[f(X^n)f(\hat{X}^n)]$ is  termed the \emph{noise stability} of the Boolean function $f$ with parameter $\rho^2$, which is denoted as $\mathbf{S}_{\rho^2}[f]$. As mentioned in the discussion following~\eqref{eq:for_jp}, 
\begin{align}
\mathbf{S}_{\rho^2}[f]=\mathbf{S}^{(2)}_{\rho}[f]. \label{eq:stability_stability}
\end{align}

To better understand the concept of the
$q$-stability, we now compute it for two  functions.

\begin{example}
For  the dictator
function $\mathrm{Maj}_{1}$, 
\begin{align}
\mathbf{S}^{(q)}_{\rho}[\mathrm{Maj}_{1}] & =\mathbf{S}^{(q)}_{\rho}[X_{1}]\\
 & =\mathbb{E}_{Y_{1}}\big[(\mathbb{E}[X_{1}|Y_{1}])^{q}\big]\\
 & =\frac{1}{2}\Big(\frac{1+\rho}{2}\Big)^{q}+\frac{1}{2}\Big(\frac{1-\rho}{2}\Big)^{q}.
\end{align}
\end{example}
\begin{example}
For the  indicator of the Hamming ball $\mathrm{Maj}_{n}$,
   it is not easy to derive the exact value of
 its $q$-stability for each dimension $n \in\bbN$. However, one can   determine the limit of the
$q$-stability of $\mathrm{Maj}_{n}$ as $n\to \infty$. 
By the (multivariate) central limit theorem,  
\begin{equation} 
\frac{2}{\sqrt{n}}\bigg(\sum_{i=1}^{n}\begin{bmatrix}
X_i	\\ Y_i
\end{bmatrix}-\frac{n}{2}\begin{bmatrix}
1 \\ 1
\end{bmatrix}\bigg) \stackrel{\mathrm{d}}{\longrightarrow}\calN\bigg( \begin{bmatrix}
0 \\ 0 
\end{bmatrix},\bK \bigg),
\end{equation}
where the covariance matrix $\bK$ is  defined 
in~\eqref{eqn:cov_mat_rho}.  Define the \emph{Gaussian $q$-stability
function} $\Lambda_{\rho}^{(q)} : [0,1]\to [0,1]$ as 
\begin{align}
\hspace{-.25in}\Lambda_{\rho}^{(q)}(a)&:=\mathbb{E}\big[\Pr(U\le\Phi^{-1}(a)|V)^{q}\big]  \label{eq:GaussianStabFunc}  = \bbE \bigg[\Phi\Big( \frac{ \Phi^{-1}(a)-\rho V } { \sqrt{ 1-\rho^2}} \Big)^q \bigg] ,
\end{align}
where $(U,V)$ is a pair of jointly Gaussian random variables with zero mean and covariance matrix $\bK$. 
Therefore, for every $(\rho, q) \in (-1,1)\times [0,1]$, 
the limit of the $q$-stability of $\mathrm{Maj}_{n}$
is 
\begin{equation}
\lim_{n\to\infty}\mathbf{S}^{(q)}_{\rho}[\mathrm{Maj}_{n}]=\Lambda_{\rho}^{(q)}(1/2).\label{eq:StabMAJn}
\end{equation}
\end{example}

We   relate the $q$-stability to the NICD problem by observing
that for an integer $k$,  the forward joint probability in \eqref{eq:for_jp} can be rewritten as
\begin{align}
\Gamma_{\rho}^{(k)}(a) & =\max_{\textrm{Boolean }f:\mathbb{E}[f(X^{n})]=a}\mathbf{S}^{(k)}_{\rho}[f].\label{eqn:asymp_max_k}
\end{align}
Hence, it is natural to term $\Gamma_{\rho}^{(k)}(a)$ as the
{\em asymmetric max $k$-stability at mean~$a$}. If we replace
the integer~$k$ in~\eqref{eqn:asymp_max_k} with an arbitrary
real number $q\in[1,\infty)$, we can define the asymmetric
max $q$-stability at mean $a$~\cite{eldan2015two,li2021boolean}
as follows.
\begin{definition} \label{def:asym_max}
For $q\in[1,\infty)$,  the {\em asymmetric
max $q$-stability at mean~$a$} is defined as 
\begin{align}
\Gamma_{\rho}^{(q)}(a)  & :=\max_{\textrm{Boolean }f:\mathbb{E}[f(X^{n})]=a}\mathbf{S}^{(q)}_{\rho}[f]\label{eqn:asymp_max_q_0} \\
 & =\max_{\calA\subset\{0,1\}^{n}:\pi_{X}^{n}(\calA)=a}\mathbb{E}_{Y^{n}}\big[\pi_{X|Y}^{n}(\calA|Y^{n})^{q}\big].\label{eqn:asymp_max_q}
\end{align}
The equality in \eqref{eqn:asymp_max_q} follows because for a Boolean function $f$, $T_{\rho}f(y^{n})=\pi_{X|Y}^{n}(\calA|y^{n})$ with $\calA$ being the support of $f$; see~\eqref{eqn:noise_op}. 
\end{definition}
A few remarks concerning this definition are in order. First, for fixed  $a\in[0,1]$, the function $q \in [1,\infty)\mapsto \Gamma_{\rho}^{(q)}(a)$ is
nonincreasing. 
 Second, given $q\ge1$ and two correlation coefficients
$0\le\rho\le\hat{\rho}\le1$, for any $y^n\in \{0,1\}^n$,  we have 
\begin{align}
(T_{\rho} f)^{q}(y^{n}) &= (T_{\rho/\hat{\rho}}T_{\hat{\rho}}f)^{q}(y^{n}) \le T_{\rho/\hat{\rho}}(T_{\hat{\rho}}f)^{q}(y^{n}), \label{eqn:jens}
\end{align}
where the equality follows from the fact that $T_{\rho_1\rho_2}=T_{\rho_1}T_{\rho_2}$ for all $\rho_1,\rho_2 \in [0,1]$, and  the inequality follows by Jensen's inequality ($x\mapsto x^q$ is convex for $q\ge1$). From~\eqref{eqn:jens}, we obtain 
\begin{align}
\mathbf{S}^{(q)}_{\rho}[f]  & \le\mathbb{E}_{Y^{n}}\big[T_{\rho/\hat{\rho}}(T_{\hat{\rho}}f)^{q}(Y^{n})\big]\label{eqn:jens0}\\
 & =\mathbb{E}_{Z^{n}}\big[(T_{\hat{\rho}}f)^{q}(Z^{n})\big] \qquad (Z^n\sim\mathrm{Unif}\{0,1\}^n)\label{eqn:construct_bsc}\\
 & =\mathbf{S}^{(q)}_{\hat{\rho}}[f],\label{eqn:construct_bsc1}
\end{align}
where~\eqref{eqn:construct_bsc} follows because if the input to a binary symmetric channel is uniform, so is its output.\footnote{The block of inequalities  in \eqref{eqn:jens0}--\eqref{eqn:construct_bsc1} can also be re-interpreted as follows. Given a DSBS $(X,Y)$ with correlation coefficient $\rho \in [0,1]$, we can construct a Markov chain $X-Z-Y$ with correlation coefficient between $X$ and $Z$ being $\hat{\rho}\in [0,\rho]$ such that for any $q\ge1$,  we have 
$
\bbE [\bbE[f(X)|  Y]^q ] \le \bbE[ \bbE[\bbE[f(X)|  Z]^q] |  Y] ] = \bbE [\bbE[f(X)|  Z]^q]
$.}
Hence, given $q\ge 1$ and $a\in[0,1]$, the function $\rho\in [0,1]\mapsto \Gamma_{\rho}^{(q)}(a)$ is nondecreasing.  Finally,
if $\rho=1$ (i.e., there is no noise), then  $\Gamma_{1}^{(q)}(a)=a$. If instead
$\rho=0$, then $\Gamma_{0}^{(q)} (a)=a^{q}$.


To find the solution to the asymmetric  max $q$-stability problem (i.e., the  equivalent optimization problems in Definition~\ref{def:asym_max}), we have to  identify  Boolean functions that are the ``most
stable'' under the action of the noise operator~$T_\rho$, with the stability being measured by the $q$-stability $\mathbf{S}^{(q)}_{\rho}[f]$. 

Analogously to the asymmetric max $q$-stability at mean $a$, one
can define a {\em symmetric} version of this stability notion by
maximizing the sum of the $q$-stabilities of $f$ and $1-f$. 
\begin{definition}
For $q>1$, the {\em symmetric max $q$-stability
at mean~$a$} is\footnote{We use the breve accent on symbols  that signify {\em symmetric} quantities (e.g., $\breve{\Gamma}_{\rho}^{(q)}$). The breve serves as a mnemonic as it is symmetric about a vertical axis.}
\begin{align}
\breve{\Gamma}_{\rho}^{(q)}(a):=\max_{\textrm{Boolean }f:\mathbb{E}[f(X^{n})]=a}\breve{\mathbf{S}}^{ (q)}_{\rho}[f],\label{eqn:sym_max_q}
\end{align}
where 
\begin{align}
\breve{\mathbf{S}}^{ (q)}_{\rho}[f]:=\mathbf{S}^{(q)}_{\rho}[f]+\mathbf{S}^{(q)}_{\rho}[1-f] \label{eqn:sym_q_stab_f}
\end{align}
is the {\em symmetric $q$-stability of $f$}. 
\end{definition}
 Let $f$ be an anti-symmetric Boolean function. Consider, 
\begin{align}
\mathbf{S}^{(q)}_{\rho}[1-f] &= \mathbf{S}^{(q)}_{\rho}[f(1^{n}-\cdot)] \label{eqn:use_asymmetric}\\
 & = \frac{1}{2^n}\sum_{y^{n}\in\{0,1\}^{n}} \big(\mathbb{E}[f(\bar{X}^{n})|Y^{n}=y^{n}]\big)^{q}\\
 & = \frac{1}{2^n}\sum_{\bar{y}^{n}\in\{0,1\}^{n}}\big(\mathbb{E}[f(\bar{X}^{n})|\bar{Y}^{n}=\bar{y}^{n}]\big)^{q}\\
 & =\mathbf{S}^{(q)}_{\rho}[f], \label{eqn:same_dist}
\end{align}
where~\eqref{eqn:use_asymmetric} follows from~\eqref{eqn:asymmetric}, and  
\eqref{eqn:same_dist} follows because $(\bar{X}^{n},\bar{Y}^{n})$ has the
same joint distribution as $(X^{n},Y^{n})$. Hence, for an anti-symmetric
Boolean function $f$, 
\begin{align}
\breve{\bS}_{\rho}^{(q)}[f]   =2\,\mathbf{S}^{(q)}_{\rho}[f].\label{eq:stability-asym-func}
\end{align}

Furthermore, similarly to the asymmetric case, the symmetric max $q$-stability
also admits an important operational interpretation in the $k$-user
NICD problem; see~\eqref{eqn:sym_max_k}.  To describe this, we need to first introduce the following  proposition, which is the symmetric counterpart of Proposition~\ref{prop:identity2-1} and is 
due to \citet{mossel2005coin}. The proof is almost the
same as that of Proposition \ref{prop:identity2-1} and hence, is omitted. 


\begin{proposition}\label{prop:identity2} 
Let $\calF $ be any class of Boolean functions. Let  $k,n\ge 1$ and $\rho\in (0,1)$.
Every tuple of functions $(f_{1},\ldots,f_{k}) \in \calF^k$ that maximizes $\Pr\big(f_{1}(X_{1}^{n})=\ldots=f_{k}(X_{k}^{n})\big)$
satisfies $f_{1}=f_{2}=\ldots=f_{k}$.
\end{proposition}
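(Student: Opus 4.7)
The plan is to mirror the convexity argument used in the proof of Proposition~\ref{prop:identity2-1}, adapted to account for the fact that the symmetric agreement event $\{f_{1}(X_{1}^{n})=\ldots=f_{k}(X_{k}^{n})\}$ decomposes into the two disjoint events ``all equal to $1$'' and ``all equal to $0$''. First, I would enumerate the finite class of Boolean functions as $\calF=\{g_{j}:j\in[M]\}$, and for any tuple $(f_{1},\ldots,f_{k})\in\calF^{k}$ suppose that $g_{j}$ is used by $kp_{j}$ of the $k$ users, so that $(p_{1},\ldots,p_{M})$ is a PMF on $[M]$.

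Next, I would condition on $Y^{n}$ and use the fact that $X_{1}^{n},\ldots,X_{k}^{n}$ are conditionally independent given $Y^{n}$, together with $T_{\rho}g_{j}(Y^{n})=\Pr(g_{j}(X_{1}^{n})=1\mid Y^{n})$ (cf.~\eqref{eqn:noise_op}), to write
\begin{align}
&\Pr\big(f_{1}(X_{1}^{n})=\ldots=f_{k}(X_{k}^{n})\big) \nn\\
&\qquad=\mathbb{E}_{Y^{n}}\bigg[\prod_{j=1}^{M}\big(T_{\rho}g_{j}(Y^{n})\big)^{kp_{j}}+\prod_{j=1}^{M}\big(1-T_{\rho}g_{j}(Y^{n})\big)^{kp_{j}}\bigg].
\end{align}
The key observation, exactly as in the proof of Proposition~\ref{prop:identity2-1}, is that for any nonnegative reals $a_{1},\ldots,a_{M}$, the map $(p_{1},\ldots,p_{M})\mapsto\prod_{j=1}^{M}a_{j}^{p_{j}}$ is convex on the probability simplex $\calP([M])$; this follows because its logarithm is linear in $p$ and $\exp$ is convex. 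Consequently, both $\prod_{j=1}^{M}(T_{\rho}g_{j}(y^{n}))^{kp_{j}}$ and $\prod_{j=1}^{M}(1-T_{\rho}g_{j}(y^{n}))^{kp_{j}}$ are convex in $p$ pointwise in $y^{n}$, and their sum and expectation remain convex.

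Finally, I would invoke the basic fact that a convex function on a polytope attains its maximum at an extreme point. Since the vertices of $\calP([M])$ are exactly the unit vectors $\be_{j}$, the maximum of the right-hand side over $(p_{1},\ldots,p_{M})\in\calP([M])$ is attained at some $\be_{j^{\ast}}$, which corresponds to $f_{1}=f_{2}=\ldots=f_{k}=g_{j^{\ast}}$. To upgrade this to the \emph{necessity} statement---that \emph{every} maximizing tuple satisfies $f_{1}=\ldots=f_{k}$---I would follow the approach alluded to in~\cite{mossel2005coin}, checking that unless one of the $p_{j}$ equals one, the convex combination gives strictly smaller value than some vertex; the only place where this strictness could fail is when the summand is constant on the simplex, which can be ruled out for $\rho\in(0,1)$ by a direct examination. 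I expect this last strict-convexity/uniqueness step to be the main obstacle, since the convexity argument itself only immediately yields \emph{existence} of an optimizer of the desired form.
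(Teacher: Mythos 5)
Your proposal is correct and follows essentially the same route the paper indicates: the paper explicitly states that the proof of Proposition~\ref{prop:identity2} is ``almost the same'' as that of Proposition~\ref{prop:identity2-1} and omits it, and your adaptation---adding the ``all equal to $0$'' product term, noting that each product $\prod_j a_j^{p_j}$ is convex in $p$ as $\exp$ of a linear form, and observing that the sum and the expectation over $Y^n$ preserve convexity so the max is attained at a vertex of the simplex---is exactly that adaptation. Your handling of the necessity part (deferring to the strict-convexity argument in \cite{mossel2005coin} and flagging it as the remaining work) also mirrors the paper, which likewise defers that step to the same reference.
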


By choosing $\mathcal{F}$  in Proposition \ref{prop:identity2} to be the set of 
Boolean functions with mean $a$, we deduce that the symmetric max $q$-stability with $q=k$ (an integer) satisfies
\begin{align}
\breve{\Gamma}_{\rho}^{(k)}(a)& =\max_{\substack{\textrm{Boolean }f_{i},1\le i\le k:\\
\Pr(f_{i}(X_{i}^{n})=1)=a
}
}\Pr\big(f_{1}(X_{1}^{n})=\ldots=f_{k}(X_{k}^{n}) \big).\label{eqn:sym_max_k}
\end{align}
This is also called the {\em symmetric forward joint probability}
in the $k$-user NICD problem. Thus, the symmetric max $k$-stability $\breve{\Gamma}_{\rho}^{(k)}$ quantifies the {\em maximum agreement probability} over all Boolean functions with a fixed mean in the $k$-user NICD problem (Fig.~\ref{fig:NICD-with-k}).  In contrast, the forward joint probability or asymmetric max $k$-stability  $\Gamma_{\rho}^{(k)}$ (in~\eqref{eq:for_jp} and \eqref{eqn:asymp_max_k}) quantifies the maximum agreement probability {\em when the generated bits take on the value $1$}. 

\subsection{Variants of $q$-Stabilities}

The reader will notice that the definitions of the asymmetric
and symmetric max $q$-stabilities in~\eqref{eqn:asymp_max_q} and~\eqref{eqn:sym_max_q}
are trivial for the case $q=1$, since for this case, any Boolean
$f$ such that $\mathbb{E}[f(X^{n})]=a$ satisfies 
\begin{align}
\mathbf{S}^{(1)}_{\rho}[f]=a\quad\mbox{and}\quad\breve{\bS}_{\rho}^{(1)}[f] =1. \label{eqn:simpleS}
\end{align}
Hence, the asymmetric and symmetric max $1$-stabilities at mean $a$
are attained by {\em any} Boolean functions with mean $a$. Are there any
``more meaningful'' notions of asymmetric and symmetric max $q$-stabilities for $q=1$?
We answer this question in the affirmative by defining variants of
the max  $q$-stabilities. These variants connect the $q$-stabilities
to the {\em most informative Boolean functions} problem of \citet{courtade2014boolean},
one of the most important open problems in information theory at the
time of the writing of this monograph.

To introduce these variants, for $q\ge1$, define\footnote{These functions are not to be confused with the Gaussian cumulative distribution function which is also denoted as $\Phi(\cdot)$.}  $\Phi_{q}, \breve{\Phi}_{q}: (0,1)\to\bbR$ as 
\begin{align}
\Phi_{q}(t) & :=t\cdot \frac{\ln_{q}(t)}{\ln 2} \quad\mbox{and}\quad
\breve{\Phi}_{q} (t)   :=\Phi_{q}(t)+\Phi_{q}(1-t), \label{eqn:def_Phiq}
\end{align}
where  $\ln_q :(0,\infty)\to\bbR$ is defined as 
\begin{equation}
\ln_{q}(t):=\begin{cases}
\ln(t) &  q=1 \vspace{0.03in}\\
\displaystyle\frac{t^{q-1}-1}{q-1} &q>1
\end{cases}
\end{equation}
and is known as the \emph{$q$-logarithm} introduced by \citet{tsallis1994numbers},
but with a slight reparameterization. Note that  for $\Phi_{q}$ and $\breve{\Phi}_{q}$,
the case of $q=1$ is the continuous extension  of the case $q>1$.
\begin{definition}
  For a Boolean function $f:\{0,1\}^{n}\to\{0,1\}$ and another function $\Phi: (0,1)\to\bbR$, define the
\emph{$\Phi$-stability  of $f$} with respect to a correlation parameter $\rho$ as 
\begin{align}
\mathbf{S}_{\rho}^{( \Phi) }  [f] & =\mathbb{E}_{Y^n}\big[\Phi(T_{\rho}f(Y^{n}))\big]. \label{eqn:Phi_stab}
\end{align}
\end{definition}
Thus, this definition is analogous to that of the $q$-stability (Definition~\ref{def:qstab}) as we recover the latter when we instantiate $\Phi(t) = t^q$. We are, however, going to consider $\Phi$ to be the functions in~\eqref{eqn:def_Phiq}.
\begin{definition} \label{def:Phi_versions}
Define the   \emph{$\Phi$-asymmetric} and \emph{$\Phi$-symmetric max
$q$-stabilities   at mean $a$} as 
\begin{align}
{\Pi}_{\rho}^{(q)} (a) & :=\max_{\textrm{Boolean }f:\mathbb{E}[f(X^{n})]=a}\mathbf{S}_{\rho}^{ ( \Phi_{q})}[f]\quad\mbox{and}\label{eq:-3}\\
\breve{\Pi}_{\rho}^{(q)} (a) & :=\max_{\textrm{Boolean }f:\mathbb{E}[f(X^{n})]=a}\mathbf{S}_{\rho}^{ ( \breve{\Phi}_{q})}[f].\label{eq:-4}
\end{align}
\end{definition}

For $q>1$, it is easy to verify that 
\begin{align}
{\Pi}_{\rho}^{(q)} (a)=\frac{{\Gamma}^{(q)}_{\rho}(a)-a}{(q-1)\ln2}\quad\mbox{and}\quad\breve{\Pi}_{\rho}^{(q)}(a)=\frac{\breve{\Gamma}_{\rho}^{(q)}(a)-1}{(q-1)\ln2},\label{eqn:qge1}
\end{align}
where ${\Gamma}_{\rho}^{(q)}(a)$ and $\breve{\Gamma}_{\rho}^{(q)}$
are the asymmetric and symmetric max $q$-stabilities defined in \eqref{eqn:asymp_max_q}
and \eqref{eqn:sym_max_q} respectively. For $q=1$,  the $\Phi$-asymmetric and $\Phi$-symmetric max $1$-stabilities at mean $a$ can be expressed  respectively  as  
\begin{equation}
{\Pi}_{\rho}^{(1)} (a)=\max_{\textrm{Boolean }f:\mathbb{E}[f(X^{n})]=a}\mathbb{E}_{Y^{n}}\big[T_{\rho}f(Y^{n})\log T_{\rho}f(Y^{n})\big],\label{eq:NICD-2}
\end{equation}
and 
\begin{align}
\breve{\Pi}_{\rho}^{(1)}(a)& =\max_{\textrm{Boolean }f:\mathbb{E}[f(X^{n})]=a}-H(f(X^{n})|Y^{n})\label{eq:NICD-3}\\
 & =\max_{\textrm{Boolean }f:\mathbb{E}[f(X^{n})]=a}I(f(X^{n});Y^{n})-h(a).\label{eq:NICD-8}
\end{align}

The objective function in \eqref{eq:NICD-2} is known as the \textit{entropy
(functional)} of the noisy Boolean function $T_{\rho}f$, and the
objective function in \eqref{eq:NICD-3} is the negative \textit{conditional
Shannon entropy} of $f(X^{n})$ given $Y^{n}$. For dictator functions
$f$, the conditional Shannon entropy $H(f(X^{n})|Y^{n})$ is equal
to $H(X_1|Y_1) =h((1-\rho)/2)$. 

The maximization in~\eqref{eq:NICD-8} for $a=1/2$ corresponds to
the balanced version of the {\em most informative Boolean function}
problem which was first studied in the papers of Courtade and Kumar~\cite{kumar2013boolean,courtade2014boolean}. They conjectured that the
maximum in \eqref{eq:NICD-8} is attained by dictator functions (cf.\ Section~\ref{sec:subcubes}).
In the following section, we provide more details on  this
conjecture, and we also review several related conjectures concerning   the
$q$-stabilities. Observe from the one-to-one relationships in \eqref{eqn:qge1} that for $q>1$,
the original definitions of the asymmetric and symmetric max $q$-stabilities
${\Gamma}_{\rho}^{(q)}$ and $\breve{\Gamma}_{\rho}^{(q)}$
in~\eqref{eqn:asymp_max_q} and~\eqref{eqn:sym_max_q} are ``equivalent''
to their $\Phi$-versions ${\Pi}_{\rho}^{(q)}$ and $\breve{\Pi}_{\rho}^{(q)}$
defined respectively in \eqref{eq:-3} and \eqref{eq:-4}, in the sense that once
the former (resp.\ the latter) has  been determined, the latter (resp.\
the former) will also be  determined. Hence, throughout this section,
for $q>1$, we refer to  ${\Gamma}_{\rho}^{(q)}$
and ${\Pi}_{\rho}^{(q)}$  interchangeably for the asymmetric case. We will also refer to $\breve{\Gamma}_{\rho}^{(q)}$ and $\breve{\Pi}_{\rho}^{(q)}$ interchangeably for the symmetric case.
However, for  $q=1$, we only consider
the quantities ${\Pi}_{\rho}^{(1)}$ and $\breve{\Pi}_{\rho}^{(1)}$
since the definitions of ${\Gamma}_{\rho}^{(1)}$ and $\breve{\Gamma}_{\rho}^{(1)}$
are trivial for this case; see~\eqref{eqn:simpleS}.

\section{Related Conjectures}
\label{sec:conj}

In this section, we introduce several prominent conjectures on the
max $q$-stabilities. We first consider the optimality of dictator
functions in attaining the asymmetric and symmetric max $q$-stabilities
at mean $a=1/2$ (also called the {\em balanced} case). For ease of reference, we first state
a corollary to Witsenhausen's classical result~\cite{witsenhausen1975sequences} in Theorem \ref{thm:wit}. This corollary implies that dictator functions are optimal in attaining
asymmetric or symmetric max $q$-stabilities for $q=2$ and $a=1/2$ (the $2$-user NICD problem in the CL regime with $a=b=1/2$).  

\begin{corollary} \label{thm:wit-stability} For $q=2$ and $\rho\in(0,1)$,
both ${\Gamma}_{\rho}^{(q)}( {1}/{2})$ and $\breve{\Gamma}_{\rho}^{(q)}( {1}/{2})$
are attained by dictator functions.  \end{corollary}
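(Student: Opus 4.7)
The plan is to deduce the corollary directly from Witsenhausen's Theorem~\ref{thm:wit}, leveraging the identity~\eqref{eq:stability_stability} that for $q=2$ the $q$-stability coincides with the ordinary noise stability at the squared correlation, namely $\mathbf{S}_\rho^{(2)}[f] = \mathbf{S}_{\rho^2}[f]$. For any Boolean $f$ with support $\mathcal{A} \subset \{0,1\}^n$, this lets us rewrite $\mathbf{S}_\rho^{(2)}[f] = \pi_{X_1 X_2}^n(\mathcal{A}\times\mathcal{A})$, where $(X_1^n,X_2^n)$ is a source sequence drawn from a DSBS with correlation coefficient~$\rho^2$; this is precisely the joint law between any two of the $k$ users depicted in Fig.~\ref{fig:NICD-with-k}.

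For the asymmetric case, I would fix a Boolean $f$ with $\mathbb{E}[f(X^n)]=1/2$, so that $\mathcal{A}=\mathrm{supp}(f)$ satisfies $\pi_X^n(\mathcal{A})=1/2$, and then apply Theorem~\ref{thm:wit} to the DSBS with correlation $\rho^2$ and to the sets $\mathcal{A}=\mathcal{B}$ with $a=b=1/2$. This yields
\begin{equation*}
\mathbf{S}_\rho^{(2)}[f] \;=\; \pi_{X_1 X_2}^n(\mathcal{A}\times\mathcal{A}) \;\le\; \tfrac{1}{4} + \rho^2 \sqrt{\tfrac{1}{2}\cdot\tfrac{1}{2}\cdot\tfrac{1}{2}\cdot\tfrac{1}{2}} \;=\; \frac{1+\rho^2}{4}.
\end{equation*}
The matching lower bound is supplied by any dictator $f(x^n)=x_i$, whose support is the $(n{-}1)$-subcube $\{x^n:x_i=1\}$: evaluating directly gives $\pi_{X_1 X_2}^n(\mathcal{A}\times\mathcal{A})=\Pr(X_{1,i}=X_{2,i}=1)=(1+\rho^2)/4$. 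Hence $\Gamma_\rho^{(2)}(1/2)=(1+\rho^2)/4$, attained by dictators.

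For the symmetric case, I would observe that if $\mathbb{E}[f]=1/2$ then $\mathbb{E}[1-f]=1/2$ as well, with $\mathrm{supp}(1-f)=\mathcal{A}^{\mathrm{c}}$. Applying Witsenhausen's upper bound to each summand in
\begin{equation*}
\breve{\mathbf{S}}_\rho^{(2)}[f]=\mathbf{S}_\rho^{(2)}[f]+\mathbf{S}_\rho^{(2)}[1-f]=\pi_{X_1X_2}^n(\mathcal{A}\times\mathcal{A})+\pi_{X_1X_2}^n(\mathcal{A}^{\mathrm{c}}\times\mathcal{A}^{\mathrm{c}})
\end{equation*}
then gives $\breve{\mathbf{S}}_\rho^{(2)}[f]\le (1+\rho^2)/2$. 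A dictator saturates both terms at once, since the complement of the $(n{-}1)$-subcube $\{x_i=1\}$ is the $(n{-}1)$-subcube $\{x_i=0\}$; hence $\breve{\Gamma}_\rho^{(2)}(1/2)=(1+\rho^2)/2$ is also attained by dictators.

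Since the argument is a direct specialization of Witsenhausen's maximal-correlation bound, there is no substantive obstacle beyond bookkeeping. The only point worth flagging is that the relevant DSBS in the reduction has correlation coefficient $\rho^2$ rather than $\rho$, because the two random vectors being compared are two noisy copies of a common source rather than the source together with a single noisy copy; this is also the reason the optimal values $(1+\rho^2)/4$ and $(1+\rho^2)/2$ differ from the analogous bounds on the $2$-user DSBS problem at $a=b=1/2$ exactly by replacing $\rho$ with $\rho^2$.
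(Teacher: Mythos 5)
Your proof is correct and takes exactly the route the paper intends: the corollary is presented there as a direct consequence of Theorem~\ref{thm:wit}, and your write-up supplies the bookkeeping the paper leaves implicit — the identification $\mathbf{S}_\rho^{(2)}[f]=\mathbf{S}_{\rho^2}[f]$ via~\eqref{eq:stability_stability}, the application of Witsenhausen's bound to the DSBS with correlation coefficient $\rho^2$ (since $X_1-Y-X_2$), and the saturation by a dictator's $(n-1)$-subcube together with its complement. Your flagged observation about $\rho$ versus $\rho^2$ is correct and is exactly the same point the paper makes just above Definition~\ref{def:qstab}.
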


There was no further progress on the max $q$-stability problem for
almost $30$ years since Witsenhausen's seminal work~\cite{witsenhausen1975sequences} in $1975$.  In $2005$,  \citet{mossel2005coin} considered the 
symmetric max $q$-stability problem with $q\in\{3,4,5,\ldots\}$ and
made  progress
on this problem. They resolved the case of $q=3$ for the
balanced case (i.e., $a=1/2$) using a cute reduction argument.

\begin{theorem} \label{thm:mos-stability} For $q=3$ and $\rho\in(0,1)$,
$\breve{\Gamma}_{\rho}^{(q)}( {1}/{2})$ is attained by
dictator functions.  \end{theorem}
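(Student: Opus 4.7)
The plan is to reduce the case $q=3$ to the already-resolved case $q=2$ (Corollary~\ref{thm:wit-stability}) via a short algebraic identity that is specific to $q=3$ and $a=1/2$. Let $f:\{0,1\}^n\to\{0,1\}$ be any Boolean function with $\bbE[f(X^n)]=1/2$, and set $u(y^n):=T_\rho f(y^n)\in[0,1]$. Then $T_\rho(1-f)=1-u$ and the definition of the symmetric $q$-stability together with the elementary identity
\begin{equation}
u^3+(1-u)^3 \;=\; 3u^2-3u+1 \nonumber
\end{equation}
gives
\begin{equation}
\breve{\bS}_\rho^{(3)}[f] \;=\; \bbE_{Y^n}\bigl[u^3+(1-u)^3\bigr] \;=\; 3\,\bbE[u^2]-3\,\bbE[u]+1. \nonumber
\end{equation}

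The next step uses balancedness in a crucial way: since the DSBS has $\pi_X=\pi_Y=\mathrm{Unif}\{0,1\}$, one has $\bbE[u]=\bbE[T_\rho f(Y^n)]=\bbE[f(X^n)]=1/2$, while $\bbE[u^2]=\bS_\rho^{(2)}[f]$ by definition of the $q$-stability. Substituting yields the clean affine reduction
\begin{equation}
\breve{\bS}_\rho^{(3)}[f] \;=\; 3\,\bS_\rho^{(2)}[f]\;-\;\tfrac{1}{2}, \nonumber
\end{equation}
valid for \emph{every} balanced Boolean $f$. Maximizing the left-hand side over balanced $f$ is therefore equivalent to maximizing $\bS_\rho^{(2)}[f]$ over the same class.

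For the latter problem, I would invoke the identity $\bS_\rho^{(2)}[f]=\bS_{\rho^2}[f]$ from \eqref{eq:stability_stability} and apply Witsenhausen's maximal-correlation bound (Theorem~\ref{thm:wit}) to $\calA=\calB=\supp(f)$ with respect to the DSBS of correlation $\rho^2$; taking $a=b=1/2$ gives $\bS_\rho^{(2)}[f]\le (1+\rho^2)/4$, with equality attained by any dictator $f(x^n)=x_i$. Combining these two steps, $\breve{\bS}_\rho^{(3)}[f]\le(1+3\rho^2)/4$, with equality for dictator functions, establishing the theorem.

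There is no genuine technical obstacle here beyond the algebraic observation; the interesting point is understanding why the argument is fragile. The reduction works only because $u^3+(1-u)^3$ is a \emph{quadratic} polynomial in $u$, so after the mean constraint is used to pin down $\bbE[u]$, only the second moment remains to be controlled -- and that is exactly the $2$-stability. For $q\ge 4$, the expansion $u^q+(1-u)^q$ produces moments $\bbE[u^3],\bbE[u^4],\ldots$, i.e.\ higher-order stabilities whose maxima are not known; analogously, for $a\ne 1/2$ the maximizers of $\bS_\rho^{(2)}[f]$ over mean-$a$ functions need not be dictators (as already seen via the LP/Fourier analysis of Section~\ref{sec:quarter}). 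So the ``cute reduction'' is genuinely tailored to the single cell $(q,a)=(3,1/2)$.
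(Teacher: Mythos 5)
Your proof is correct, and at its core it is the same reduction the paper uses: express the symmetric $3$-stability in terms of the $2$-stability and then invoke Witsenhausen's $q=2$ result. The execution differs, though, in a way worth noting. The paper works through the three-user NICD interpretation, writing the probability mass function of $(U_1,U_2,U_3)$ for three possibly distinct $f_i$ and deriving the bookkeeping identity $3+\sum_{i\ne j}\Pr(U_i=U_j)=5+4\,\Pr(U_1=U_2=U_3)$, after which it observes that the left side is maximized (over balanced $f_i$) by identical dictators. You instead work directly from the single-function definition of $\breve{\Gamma}_\rho^{(3)}$ with $u=T_\rho f$, and the pointwise polynomial identity $u^3+(1-u)^3=3u^2-3u+1$, combined with balancedness ($\bbE[u]=1/2$) to kill the linear term, yields the affine relation $\breve{\bS}_\rho^{(3)}[f]=3\,\bS_\rho^{(2)}[f]-\tfrac12$. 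The two identities are the same in disguise: specialize the paper's probability identity to $f_1=f_2=f_3=f$, write each event probability as an expectation in $u$, and you recover exactly your polynomial identity. What your packaging buys is transparency: it avoids the detour through the multi-user formulation and the implicit simultaneous-maximization argument for the three pairwise probabilities, and it cleanly isolates both why $q=3$ is special (the cubic terms cancel, leaving a quadratic) and where balancedness is used (to pin down $\bbE[u]$). Your closing diagnosis of why the trick is confined to $(q,a)=(3,1/2)$ is also accurate.
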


\begin{proof} 
Theorem~\ref{thm:mos-stability} can be proved
by reducing the problem involving $q=3$ to the (simpler)  problem in which $q=2$. By the equivalence between the NICD problem and 
max $q$-stability, we consider
the $3$-user NICD problem with $3$ (possibly) distinct functions $(f_1, f_2, f_3)$. For brevity, denote the values of the joint  probability mass function of  $(U_1, U_2, U_3) = (f_{1}(X_{1}^{n}),f_{2}(X_{2}^{n}),f_{3}(X_{3}^{n}))$
as $\left\{ p_{000},p_{001},\ldots,p_{111}\right\} $. Then, we see that the following identity holds:
\begin{align}	
&  3  +  \sum_{(i, j) \in[3]^2 : i \ne j } \Pr (U_i = U_j )     = 5  +  4\,\Pr(U_1 = U_2 = U_3). \label{eqn:q3q2}
\end{align}
This identity can be verified by   bookkeeping the probability masses. For example, note that $\Pr(U_1 = U_2) = p_{000} + p_{001} + p_{110} + p_{111} $ and $\Pr(U_1 = U_2 = U_3) = p_{000}  + p_{111}$. 
Having established this, leveraging the case for $q=2$ (Corollary~\ref{thm:wit-stability}),
we know that the left-hand side of \eqref{eqn:q3q2} is maximized by identical dictator functions
over all balanced Boolean functions (i.e., $\bbE[f_i(X_i^n)]=1/2$); hence, so is the right-hand side. \end{proof}

Based on Corollary \ref{thm:wit-stability} and Theorem~\ref{thm:mos-stability},
one may na\"ively conjecture that dictator functions are   
optimal in attaining the asymmetric or symmetric max $q$-stability
at mean $1/2$ for any integer $q\ge 2$. However, this  was disproved by  
\citet{mossel2005coin}. Specifically, using computer-assisted calculations, they constructed    counterexamples,
as shown in the following proposition, such that when $q=10$, dictator
functions are not optimal in attaining the asymmetric and symmetric max $q$-stabilities
at mean $a=1/2$. 

\begin{proposition} \label{prop:counterexample} For $q=10$ and $\rho=0.48$,
it holds that 
\begin{align}
\mathbf{S}^{(q)}_{\rho}[\mathrm{Maj}_{3}] & > \max \big\{\mathbf{S}^{(q)}_{\rho}[\mathrm{Maj}_{1}],\mathbf{S}^{(q)}_{\rho}[\mathrm{Maj}_{5}]\big\} \quad\textrm{and }\label{eq:-7}\\
\breve{\mathbf{S}}^{(q)}_{\rho}[\mathrm{Maj}_{3}] & > \max\big\{\breve{\mathbf{S}}^{(q)}_{\rho}[\mathrm{Maj}_{1}], \breve{\mathbf{S}}^{(q)}_{\rho}[\mathrm{Maj}_{5}]\big\}.\label{eq:-6}
\end{align}
  \end{proposition}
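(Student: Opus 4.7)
The plan is to reduce the proposition to a finite, verifiable numerical calculation, since it is an explicit counterexample and has no structural proof. I would first exploit the symmetry of the majority function: since $\mathrm{Maj}_m$ depends only on $x_1,\ldots,x_m$ and is invariant under permutations of these coordinates, the noise-operator output $T_\rho\mathrm{Maj}_m(y^n)$ depends only on the Hamming weight $k\in\{0,1,\ldots,m\}$ of $(y_1,\ldots,y_m)$. Writing $\alpha=(1+\rho)/2$ and $\beta=(1-\rho)/2$, and letting $B_k\sim\mathrm{Bin}(k,\alpha)$ and $B'_{m-k}\sim\mathrm{Bin}(m-k,\beta)$ be independent, I set
\begin{equation}
\mu_k^{(m)}\;:=\;\Pr\Big(B_k+B'_{m-k}\ge \tfrac{m+1}{2}\Big),
\end{equation}
so that by the law of total expectation,
\begin{equation}
\mathbf{S}_\rho^{(q)}[\mathrm{Maj}_m]\;=\;2^{-m}\sum_{k=0}^{m}\binom{m}{k}\big(\mu_k^{(m)}\big)^{q}.
\end{equation}

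Next I would use the anti-symmetry of the majority function, $\mathrm{Maj}_m(1^m-x)=1-\mathrm{Maj}_m(x)$, which yields $\mu_{m-k}^{(m)}=1-\mu_k^{(m)}$ and therefore halves the required bookkeeping. For $m=1$ this recovers Witsenhausen's closed form $\tfrac12\alpha^{q}+\tfrac12\beta^{q}$ already stated in the excerpt. For $m=3$ and $m=5$ I would expand the binomial probabilities as polynomials in $\alpha,\beta$; for example, with $m=3$ one gets $\mu_0^{(3)}=3\beta^{2}-2\beta^{3}$, $\mu_1^{(3)}=\alpha(2\beta-\beta^{2})+(1-\alpha)\beta^{2}$, and $\mu_2^{(3)}=1-\mu_1^{(3)}$, $\mu_3^{(3)}=1-\mu_0^{(3)}$. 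The analogous but lengthier expressions for $m=5$ involve $\mu_k^{(5)}$ for $k=0,1,2$, each a polynomial of degree $5$ in $\alpha,\beta$.

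I would then plug in the specific values $\rho=0.48$ (so $\alpha=0.74$, $\beta=0.26$) and $q=10$ and evaluate the three numbers to enough decimal digits to make the strict inequalities $\mathbf{S}^{(10)}_{0.48}[\mathrm{Maj}_3]>\mathbf{S}^{(10)}_{0.48}[\mathrm{Maj}_1]$ and $\mathbf{S}^{(10)}_{0.48}[\mathrm{Maj}_3]>\mathbf{S}^{(10)}_{0.48}[\mathrm{Maj}_5]$ unambiguous. A back-of-the-envelope check already gives $\mathbf{S}^{(10)}_{0.48}[\mathrm{Maj}_1]\approx 0.02462$ and $\mathbf{S}^{(10)}_{0.48}[\mathrm{Maj}_3]\approx 0.02484$, separated by roughly $2\times 10^{-4}$, so carrying six significant digits in the polynomial evaluation will be more than sufficient; a similar precision check is needed for $\mathrm{Maj}_5$.

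Finally, for the symmetric inequalities I would invoke identity~\eqref{eq:stability-asym-func}: since every $\mathrm{Maj}_m$ with $m$ odd is anti-symmetric, $\breve{\mathbf{S}}_\rho^{(q)}[\mathrm{Maj}_m]=2\,\mathbf{S}_\rho^{(q)}[\mathrm{Maj}_m]$, so \eqref{eq:-6} is just \eqref{eq:-7} multiplied by $2$ and follows for free. The only real obstacle is arithmetic precision, not mathematical depth: one must verify that the numerical gap between $\mathbf{S}_\rho^{(10)}[\mathrm{Maj}_3]$ and its two competitors is genuine and not an artifact of rounding, which can be certified either by exact rational arithmetic on the polynomial expressions (since $\alpha=37/50$, $\beta=13/50$ are rational) or by an explicit interval-arithmetic bound on the rounding error.
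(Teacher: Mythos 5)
Your proposal is correct and takes essentially the same approach as the paper: a direct numerical (computer-assisted) evaluation of the $q$-stabilities of $\mathrm{Maj}_1,\mathrm{Maj}_3,\mathrm{Maj}_5$ at $q=10$, $\rho=0.48$, with the other family of inequalities transferred via the factor-of-two identity $\breve{\mathbf{S}}^{(q)}_{\rho}[\mathrm{Maj}_m]=2\,\mathbf{S}^{(q)}_{\rho}[\mathrm{Maj}_m]$ for odd $m$. The only cosmetic difference is that the paper numerically bounds the symmetric $\breve{\mathbf{S}}$ and deduces the asymmetric $\mathbf{S}$, whereas you compute $\mathbf{S}$ and deduce $\breve{\mathbf{S}}$; you also make explicit the Hamming-weight decomposition of $T_\rho\mathrm{Maj}_m$, which the paper leaves implicit.
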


\begin{proof} By computer-assisted calculations, for $q=10$ and $\rho=0.48$, one finds that
$\breve{\mathbf{S}}^{(q)}_{\rho}[\mathrm{Maj}_{1}]\le0.0493$, $\breve{\mathbf{S}}^{(q)}_{\rho}[\mathrm{Maj}_{5}]\le0.0488$,
and $\breve{\mathbf{S}}^{(q)}_{\rho}[\mathrm{Maj}_{3}]\ge0.0496$.
Hence, the inequality in \eqref{eq:-6} holds. The inequality in \eqref{eq:-7}
follows from~\eqref{eq:-6} since $\breve{\mathbf{S}}^{(q)}_{\rho}[\mathrm{Maj}_{m}]=2\, \mathbf{S}^{(q)}_{\rho}[\mathrm{Maj}_{m}]$
for any odd $m\le n$; see~\eqref{eq:stability-asym-func}. \end{proof}

Since $\mathbf{S}^{(q)}_{\rho}[\mathrm{Maj}_{3}] > \mathbf{S}^{(q)}_{\rho}[\mathrm{Maj}_{1}]$ and $\mathrm{Maj}_1$ is a dictator function, dictators are not optimal for $q=10$ and $\rho=0.48$. Furthermore,  since the indicators of subcubes and the indicators of Hamming balls (or spheres) have been shown to be optimal or asymptotically optimal in several cases for the NICD problem (Sections~\ref{sec:nicd_conv}--\ref{sec:ldr}), one may wonder whether the max $q$-stability is always exactly attained by these functions. The inequality $\mathbf{S}^{(q)}_{\rho}[\mathrm{Maj}_{3}] > \mathbf{S}^{(q)}_{\rho}[\mathrm{Maj}_{5}]$ implies a negative answer to this question. 
For $n=5$, $q=10$, $a=1/2$, and $\rho=0.48$, both the indicators of subcubes and Hamming balls in the $5$-dimensional Hamming cube are not optimal.
In fact, $\mathrm{Maj}_{3}$ corresponds to the indicator of a set formed by multiplying Hamming balls in the $3$-dimensional cube and the $2$-dimensional cube.

Now things have become relatively clearer. For small $q$, e.g., $q=2$ or
$q=3$, dictator functions are optimal in attaining the asymmetric (for
$q=2$) or symmetric (for $q=2,3$) max $q$-stabilities at mean $a=1/2$.
On the other hand, for large $q$, e.g., $q=10$, dictator functions
are not optimal. \citet{mossel2005coin}
conjectured that for all $q\in\{4,5,\ldots,9\}$, dictator functions maximize the symmetric $q$-stability $\breve{\Gamma}_{\rho}^{(q)}( {1}/{2})$ 
over all balanced Boolean functions.

The symmetric max $1$-stability problem  at mean $a=1/2$ (i.e., $\breve{\Pi}_{\rho}^{(1)}(1/2)$) was
studied by \citet{kumar2013boolean} and \cite{courtade2014boolean}.
This question concerns the identification of the class of balanced Boolean
functions that maximize the mutual information $I(f(Y^{n});X^{n})$;
cf.\ \eqref{eq:NICD-8}. The authors conjectured that dictator functions
maximize the symmetric $1$-stability. 
We note that this is a weaker version of the original conjecture posed
by Courtade and Kumar. In the original version of their conjecture,
the Boolean functions are not restricted to be balanced. Along these
lines, \citet{li2021boolean} conjectured that for $q\in(1,2)$ (non-integer), the
max asymmetric $q$-stability is still attained by dictator functions.
Here we summarize and generalize this family of conjectures in the
following two conjectures. 

\begin{conjecture}[Asymmetric max 
$q$-stability] \label{conj:AsymmetricStability} For $\rho\in[0,1]$
and $q\in[1,9]$, ${\Pi}_{\rho}^{(q)}(1/2)$ is attained by dictator
functions. \end{conjecture}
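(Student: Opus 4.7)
The plan is to attack Conjecture 8.2 by first reducing to a Fourier-analytic formulation and then splitting into ranges of $q$ that admit qualitatively different methods. Since the conjecture is known to be tight (dictators lose to $\mathrm{Maj}_3$ already at $q=10$ by Proposition 8.3), any successful proof technique must be delicate enough to saturate precisely at $q=9$, which rules out crude hypercontractivity bounds.

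First, I would invoke Proposition 8.1 to reduce the problem to maximizing $\mathbf{S}^{(q)}_\rho[f]$ over a single balanced Boolean function $f:\{0,1\}^n\to\{0,1\}$ with $\hat{f}_\emptyset=1/2$. Writing $g:=T_\rho f-\tfrac12=\sum_{\emptyset\ne S\subset[n]} \rho^{|S|}\hat{f}_S\chi_S$, the target becomes
\begin{equation}
\max_f \mathbb{E}\Big[\Big(\tfrac12+g(Y^n)\Big)^q\Big] \;\stackrel{?}{\le}\; \tfrac12\Big(\tfrac{1+\rho}{2}\Big)^q+\tfrac12\Big(\tfrac{1-\rho}{2}\Big)^q,
\end{equation}
with the dictator attaining equality. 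The main structural input is the two-point inequality: for $t\in[-\tfrac12,\tfrac12]$, $(\tfrac12+t)^q$ admits a Taylor expansion whose low-order coefficients are governed by Fourier weights $\mathbf{W}_m[f]$, and the sign pattern of the higher derivatives of $t\mapsto(\tfrac12+t)^q$ changes behavior near $q=9$. I would therefore try to write $\mathbb{E}[(\tfrac12+g)^q]-\mathbb{E}[(\tfrac12+g_{\mathrm{Maj}_1})^q]$ as a sum of terms each of which can be signed separately.

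Second, I would treat the ranges of $q$ using different tools:
\begin{itemize}
\item For $q=1$, which is the balanced Courtade--Kumar conjecture for $\breve{\Pi}^{(1)}_\rho$, the plan is to use the Fourier/Level-$1$ bounds of Section~6.4.3 together with strengthened hypercontractivity of Section~6.5 to control $H(f(X^n)|Y^n)$ from below by $h((1-\rho)/2)$. This case is itself a famous open problem, so only a partial resolution (e.g., regime of large $\rho$ via the arguments of Section~8.4 once stated) can realistically be hoped for in this step.
\item For the integer values $q\in\{2,3\}$, Corollary~8.1 and Theorem~8.2 already give the result; for $q\in\{4,\ldots,9\}$, I would search for higher-order combinatorial identities generalizing the Mossel--O'Donnell identity \eqref{eqn:q3q2}. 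Concretely, for $k$-user agreement one writes indicator products $\prod_i \mathbbm{1}\{U_i=1\}$ as linear combinations of pairwise/triple products and applies the known $q=2,3$ optimality to each term. The tractability of this reduction degrades rapidly with $k$, and I expect it to fail exactly when the inclusion--exclusion coefficients lose a uniform sign, which should conjecturally happen around $k=10$.
\item For non-integer $q\in(1,9)$, I would adapt the Ornstein--Uhlenbeck semigroup monotonicity technique of Mossel--Neeman used in the proof of Borell's theorem (Theorem~7.5 and Lemma~7.1), replacing the Gaussian copula $\Lambda_\rho$ by the one-dimensional function $t\mapsto t^q$ and verifying that the resulting auxiliary quantity $R_t=\mathbb{E}[(T_\rho P_tf)^q]$ is monotone along the semigroup whenever $q\in[1,9]$. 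This monotonicity would reduce the problem to the Gaussian analogue on $\mathbb{R}^n$, where symmetrization/rearrangement can be used to show that halfspaces (the Gaussian analogue of dictators) are optimal.
\end{itemize}

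The main obstacle is precisely the sharp threshold at $q\approx 9$: any technique that works uniformly in $q$ must contain a positivity or convexity statement that breaks down exactly past $q=9$. The semigroup-derivative computation in the third bullet is the most promising candidate, since the sign of $\frac{\mathrm{d}R_t}{\mathrm{d}t}$ is controlled by a single elementary function of $q$ and the local values of $T_\rho f$, and numerical evidence from Proposition~8.3 suggests such a function should change sign near $q=9$; however, carrying this out rigorously requires Boolean $f$ rather than Gaussian variables, and the transfer between the Boolean and Gaussian settings in this non-$L^2$ regime (which has no hypercontractive shortcut for non-integer $q$ beyond $2$) is precisely where prior attempts have stalled. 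A complete resolution therefore likely requires either a new combinatorial identity valid for all $q\in[1,9]$ or a substantial sharpening of the strong small set expansion theorem of Section~6.7 that is sensitive to the exact exponent $q$ rather than to the rate pair $(\alpha,\beta)$.
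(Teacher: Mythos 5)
The statement you were given is Conjecture~\ref{conj:AsymmetricStability}, an \emph{open problem}: the paper does not prove it, and explicitly records it as a unified formulation of the Courtade--Kumar, Li--M\'edard, and Mossel--O'Donnell conjectures (Table~\ref{tab:Illustration-of-various}). Your write-up is therefore a research plan rather than a proof. Evaluated as such, the third bullet rests on a structurally impossible reduction. Borell's $q$-stability theorem (Theorem~\ref{thm:borell}) already shows that in the Gaussian setting parallel halfspaces maximize the $q$-stability for \emph{every} $q>1$; there is no Gaussian phenomenon that distinguishes $q=9$ from $q=10$. The failure of dictators at $q=10$ (Proposition~\ref{prop:counterexample}) is a purely discrete effect produced by $\mathrm{Maj}_3$ on three coordinates, and any argument routed through a Gaussian limit must, in the transfer step, lose exactly the discrete precision that separates $q=9$ from $q=10$. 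Moreover the quantity $R_t=\mathbb{E}\bigl[(T_\rho P_t f)^q\bigr]$ you propose to monotonize has the wrong endpoints: $R_\infty=(1/2)^q$ while $R_0=\mathbf{S}^{(q)}_\rho[f]$, and the dictator value $\tfrac12\bigl(\tfrac{1+\rho}{2}\bigr)^q+\tfrac12\bigl(\tfrac{1-\rho}{2}\bigr)^q$ lies strictly between these two numbers for $\rho\in(0,1)$ by convexity of $x\mapsto x^q$, so monotonicity of $R_t$ in either direction would prove an inequality of the wrong type.

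You also leave unused the one genuine structural reduction the paper supplies: Lemma~\ref{lem:BarnesOzgur} shows that, for each fixed $n$, the set of $q\ge 1$ for which dictators attain $\Pi_\rho^{(q)}(1/2)$ is an interval $[q_{\min},q_{\max}]$ containing $2$. The conjecture therefore collapses, via this interval property, to the two boundary cases $q=1$ (the balanced Courtade--Kumar / Li--M\'edard conjecture) and $q=9$, and your plan of attacking each integer and non-integer value of $q$ with a different bespoke method works against the grain of the problem. The coin-flipping identity \eqref{eqn:q3q2} you invoke for small integer $q$ is exactly the paper's method at $q=3$ (Theorem~\ref{thm:mos-stability}), but that theorem proves only the \emph{symmetric} version; the asymmetric case for $q\in(2,3]$ is proved in the first author's separate work by Fourier and optimization arguments, not by an inclusion--exclusion identity, and as Section~\ref{sec:balanced} records, the range $q\in[1,2)\cup(3,9]$ remains fully open for the asymmetric version. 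Your proposal does not isolate a new idea that would close either boundary case.
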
 \begin{conjecture}[Symmetric max 
$q$-stability] \label{conj:SymmetricStability} For $\rho\in[0,1]$
and $q\in[1,9]$, $\breve{\Pi}_{\rho}^{(q)}(1/2)$ is attained
by dictator functions. \end{conjecture}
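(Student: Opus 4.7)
The plan is to attack the conjecture piecewise across three regimes: (i) the integer cases $q \in \{4,5,\ldots,9\}$, (ii) the non-integer continuous regions in $(1,9)$, and (iii) the endpoint $q=1$, which coincides with the Courtade--Kumar conjecture. The first anchor points are the resolved cases $q=2$ (Corollary~\ref{thm:wit-stability}) and $q=3$ (Theorem~\ref{thm:mos-stability}), together with Proposition~\ref{prop:identity2}, which lets me assume identical functions $f_1 = \cdots = f_k$ whenever $q=k$ is an integer.

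For the integer cases $q \in \{4,\ldots,9\}$, I would try to extend the algebraic reduction underlying Theorem~\ref{thm:mos-stability}. That proof rewrote $\Pr(U_1=U_2=U_3)$ as an affine combination of pairwise agreement probabilities $\Pr(U_i=U_j)$, reducing the $q=3$ problem to the $q=2$ problem where maximal correlation (Theorem~\ref{thm:wit}) delivers the bound. For general $k$, I would expand the indicator $\bone\{U_1=\cdots=U_k\}$ via inclusion--exclusion over the Boolean lattice $\{0,1\}^k$, writing it as a signed combination of $\bone\{\wedge_{i\in S}U_i = c_S\}$ for subsets $S \subsetneq [k]$, and then use the balanced assumption $\mathbb{E}[U_i]=1/2$ together with the identity $f_1=\cdots=f_k$ to express $\breve{\bS}_\rho^{(k)}[f]$ as an explicit linear combination of $\breve{\bS}_\rho^{(k')}[f]$ for $k' < k$. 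If all coefficients turn out to have the ``right'' sign, the conjecture would follow by induction from $q=2,3$; if not, one must control the sign-wrong terms using Fourier weight bounds such as Lemma~\ref{lem:W1}. I expect this coefficient bookkeeping to be the first real obstacle, as $q=10$ fails (Proposition~\ref{prop:counterexample}), so any reduction scheme must encode precisely why $9$ is a hard cutoff.

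For non-integer $q \in (1,9) \setminus \mathbb{Z}$, my plan is a continuity/convexity argument. For each fixed Boolean $f$, the map $q \mapsto \bS_\rho^{(q)}[f] = \mathbb{E}[(T_\rho f)^q]$ is log-convex in $q$ by H\"older's inequality, and the dictator value $\breve{\bS}_\rho^{(q)}[\mathrm{Maj}_1]$ is a known analytic function of $q$. Thus if the conjecture holds at integer $q \in \{2,3,\ldots,9\}$, one would try to interpolate by showing that any $f$ exceeding the dictator bound at some non-integer $q^\ast \in (k,k+1)$ would, by log-convexity, also exceed it at one of the adjacent integers. Combining this with the monotonicity $q \mapsto \Gamma_\rho^{(q)}(a)$ being non-increasing (discussed after Definition~\ref{def:asym_max}) should close the non-integer cases, provided the log-convexity gap is sharp enough.

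The endpoint $q=1$ is the Courtade--Kumar conjecture itself and requires genuinely new ideas; this will be the main obstacle and is widely regarded as a deep open problem. My approach would be to combine the hypercontractivity inequalities (Theorem~\ref{thm:hyper2}) with a Fourier-analytic decomposition of $H(f(X^n)\mid Y^n) = \mathbb{E}[h_{\mathrm{b}}(T_\rho f(Y^n))]$, where $T_\rho f = \sum_S \rho^{|S|}\hat{f}_S \chi_S$. The strategy would be to Taylor-expand $h_{\mathrm{b}}$ about $1/2$, use Parseval to convert the quadratic term into $\sum_S \rho^{2|S|} \hat{f}_S^2$, and control the higher-order corrections via level-$k$ inequalities generalizing Chang's bound \eqref{eqn:lp_bounds}. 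Matching the dictator value $h_{\mathrm{b}}((1-\rho)/2)$ exactly would require a novel level-$1$ style inequality that is tight at dictators for every $\rho \in [0,1]$, which is precisely the barrier that has stymied prior attempts. Barring progress on $q=1$, the natural fallback is to prove the conjecture conditionally on Courtade--Kumar, and unconditionally on the open half-interval $q \in (1,9]$ by the previous two steps.
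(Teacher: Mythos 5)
What you have labeled Conjecture~\ref{conj:SymmetricStability} is not a theorem: the paper explicitly presents it as an open conjecture unifying the balanced Courtade--Kumar conjecture ($q=1$), the Li--M\'edard conjecture ($1<q<2$), and the Mossel--O'Donnell conjecture ($2<q\le 9$). The paper contains no proof of it, and the body of Section~\ref{sec:balanced} emphasizes that only fragments are known. Your proposal is therefore a research plan rather than a proof, and each of its three legs has a concrete gap.

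The most definite error is in the interpolation step for non-integer $q$. You claim that log-convexity of $q\mapsto\bS_\rho^{(q)}[f]$ for each fixed $f$ forces any $f$ that beats the dictator at a non-integer $q^\ast\in(k,k+1)$ to also beat it at an adjacent integer. That does not follow: the relevant quantity is the \emph{difference} $g_f(q)=N_q(f)-N_q(f_0)$ between two log-convex functions, which is neither convex nor sign-coherent across an interval just because its arguments are log-convex. The paper's actual tool here is Lemma~\ref{lem:BarnesOzgur}, which uses a theorem of Laguerre on sign changes of exponential sums to show $g_f$ has at most four real roots, forcing $g_f\le 0$ on an interval $[\max\{q_4,1\},q_1]$ with $q_2=1$, $q_3=0$ fixed roots. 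That root-counting argument is strictly stronger than, and not derivable from, pointwise log-convexity, and even it does not determine the endpoints $q_{\min},q_{\max}$ (or $\breve q_{\min},\breve q_{\max}$), which remain unknown --- this is precisely why the conjecture is open.

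The integer step $q\in\{4,\ldots,9\}$ is also speculative. The $q=3$ reduction works because of the exact affine identity $3+\sum_{i\ne j}\Pr(U_i=U_j)=5+4\Pr(U_1=U_2=U_3)$, which is special to three users with balanced outputs; writing $\bone\{U_1=\cdots=U_k\}$ by inclusion--exclusion for $k\ge 4$ does not yield a signed combination of \emph{lower-order agreement probabilities alone}, because the inclusion--exclusion terms involve events like $\{U_i=U_j,\ U_\ell\ne U_m\}$ that are not of the form $\Pr(U_{i_1}=\cdots=U_{i_r})$. You anticipate this may fail, and indeed the existence of the $q=10$, $\rho=0.48$ counterexample in Proposition~\ref{prop:counterexample} shows any purely combinatorial reduction must encode a $\rho$-dependent cutoff, which no known expansion does. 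Finally, you correctly flag $q=1$ as the Courtade--Kumar conjecture itself; since the paper records only partial results (Samorodnitsky's dimension-independent small-$\rho$ theorem, and the explicit threshold $\rho_1\approx 0.461$ of \citet{yu2021phi}), there is nothing to salvage there short of resolving one of the central open problems in the field.
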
 Observe that dictator functions
are anti-symmetric. Hence, \eqref{eq:stability-asym-func} holds for dictator
functions, which implies that if Conjecture~\ref{conj:AsymmetricStability}
is true, so is Conjecture~\ref{conj:SymmetricStability}. Conjectures~\ref{conj:AsymmetricStability}
and~\ref{conj:SymmetricStability} together consist of three (named)
conjectures, as summarized in Table~\ref{tab:Illustration-of-various}.

\begin{table}
\caption{\label{tab:Illustration-of-various} Illustration of  the various named conjectures
on max $q$-stabilities; these  constitute Conjectures \ref{conj:AsymmetricStability}
and \ref{conj:SymmetricStability}  }
\begin{centering}
\begin{tabular}{|c|>{\centering}p{6cm}|}
\hline 
$q$  & Are dictators optimal in attaining ${\Pi}_{\rho}^{(q)}(1/2)$ (or
$\breve{\Pi}_{\rho}^{(q)}(1/2)$)?\tabularnewline
\hline 
\hline 
$q=1$  & Courtade--Kumar conjecture (balanced version) \cite{courtade2014boolean} \tabularnewline
\hline 
$1<q<2$  & Li--M\'edard conjecture \cite{li2021boolean} \tabularnewline
\hline 
$q=2$ ($2$-User NICD)  & True and shown by \citet{witsenhausen1975sequences} (cf.\ Section~\ref{sec:clt_half}) \tabularnewline
\hline 
$2<q\le9$  & Mossel--O'Donnell conjecture \cite{mossel2005coin} \tabularnewline
\hline 
\end{tabular}
\par\end{centering}
\end{table}

\citet{barnes2020courtade}
proved an  interesting dichotomy concerning these conjectures.


\begin{lemma} \label{lem:BarnesOzgur} For $a=1/2$, there are two
thresholds $q_{\min}$ and $q_{\max}$ satisfying $1\le q_{\min}\le  2\le q_{\max}$
such that dictator functions are optimal in attaining the asymmetric
max  $q$-stability with $q\ge 1$ if and only if $q\in[q_{\min},q_{\max}]$.
This statement also holds for the symmetric max $q$-stability
but with possibly different thresholds $\breve{q}_{\min}$
and $\breve{q}_{\max}$ satisfying the same condition  $1\le \breve{q}_{\min}\le 2\le\breve{q}_{\max}$.
\end{lemma}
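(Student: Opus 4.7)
The plan is to introduce an appropriate optimality-gap function, establish its continuity and nonnegativity, and show that its zero set is a closed interval containing $q=2$. I will sketch the argument for the asymmetric case; the symmetric case follows by the same template after observing that, by \eqref{eq:stability-asym-func}, the anti-symmetry of $\mathrm{Maj}_1$ gives $\breve{\bS}_{\rho}^{(q)}[\mathrm{Maj}_1] = 2\,\bS_{\rho}^{(q)}[\mathrm{Maj}_1]$, so the analogous log-gap for the symmetric problem admits the same analysis (with possibly different thresholds $\breve{q}_{\min},\breve{q}_{\max}$). For the asymmetric case, define
\[
\Delta(q) := \log \Gamma_\rho^{(q)}(1/2) - \log \bS_\rho^{(q)}[\mathrm{Maj}_1], \qquad q \ge 1,
\]
so that $\Delta \ge 0$ with $\Delta(2)=0$ by Witsenhausen's theorem (Corollary~\ref{thm:wit-stability}). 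Set $q_{\min}$ and $q_{\max}$ to be, respectively, the infimum and supremum of the zero set $\{q \ge 1 : \Delta(q) = 0\}$. It then suffices to prove that this zero set is exactly the closed interval $[q_{\min},q_{\max}]$, i.e.\ connected.

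The first two ingredients are routine. For every Boolean $f$ with $\bbE[f(X^n)]=1/2$, H\"older's inequality applied to $V_f := T_\rho f(Y^n) \in [0,1]$ yields the log-convexity of moments, so $q\mapsto \log \bS_\rho^{(q)}[f] = \log\bbE[V_f^q]$ is convex; consequently both $\log \Gamma_\rho^{(q)}(1/2) = \sup_f \log \bS_\rho^{(q)}[f]$ and $\log \bS_\rho^{(q)}[\mathrm{Maj}_1]$ are convex in $q$, as pointwise suprema and as a single instance, respectively. Furthermore, $\Delta$ is continuous on $[1,\infty)$ because, for fixed blocklength $n$, the optimization defining $\Gamma_\rho^{(q)}(1/2)$ runs over the finite set of Boolean functions of mean $1/2$ and each $\bS_\rho^{(q)}[f]$ depends continuously on $q$. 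Hence $q_{\min}$ and $q_{\max}$ are attained and both lie in the zero set.

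The main obstacle is that $\Delta$ is a difference of convex functions and need not itself be convex, so connectedness of its zero set does not follow from convexity alone. I would proceed by contradiction: suppose $q_1 < q_2 < q_3$ with $\Delta(q_1)=\Delta(q_3)=0$ and $\Delta(q_2)>0$, producing a Boolean $f^*$ that strictly beats the dictator at $q_2$ while satisfying $\bS_\rho^{(q_i)}[f^*] \le \bS_\rho^{(q_i)}[\mathrm{Maj}_1]$ for $i=1,3$. Writing $q_2 = (1-\lambda)q_1 + \lambda q_3$ and applying log-convexity to $f^*$ yields only the loose bound
\[
\bS_\rho^{(q_2)}[f^*] \;\le\; \bS_\rho^{(q_1)}[\mathrm{Maj}_1]^{1-\lambda}\,\bS_\rho^{(q_3)}[\mathrm{Maj}_1]^{\lambda},
\]
whose right-hand side strictly exceeds $\bS_\rho^{(q_2)}[\mathrm{Maj}_1]$ for $\rho\in(0,1)$ by the strict log-convexity of the dictator curve (the distribution of $T_\rho \mathrm{Maj}_1$ is nondegenerate), so no immediate contradiction ensues. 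The hard part will be to sharpen this into a genuine contradiction. I would attempt two refinements: (i) exploit the explicit two-point structure of $T_\rho\mathrm{Maj}_1$, which is supported on $\{(1\pm\rho)/2\}$ with equal weights, to obtain a strict comparison when combined with a tailored Jensen-type inequality applied to $V_{f^*}$; and (ii) lift the problem to the $\Phi_q$-stability setting of Definition~\ref{def:Phi_versions} and invoke the variational / hypercontractive reformulation outlined in Section~\ref{sec:ab_half}, recasting $\Gamma_\rho^{(q)}(1/2)$ as a saddle-point problem whose KKT conditions interpolate cleanly in $q$. Once the interval property is in hand, the bound $q_{\min}\le 2\le q_{\max}$ is immediate from $\Delta(2)=0$, and the symmetric case thresholds are obtained by the same three-step argument applied to $\breve{\bS}_\rho^{(q)}$ in place of $\bS_\rho^{(q)}$.
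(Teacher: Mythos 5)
Your reformulation in terms of $\Delta(q) := \log\Gamma_\rho^{(q)}(1/2) - \log\bS_\rho^{(q)}[\mathrm{Maj}_1]$ and the log-moment-convexity observations are sound, and you correctly identify the central obstruction: $\Delta$ is a difference of convex functions, so there is no general reason its zero set should be connected. However, the two ``refinements'' you propose to close this gap --- a tailored Jensen inequality exploiting the two-point structure of $T_\rho\mathrm{Maj}_1$, and a KKT/saddle-point reformulation --- are not carried out, and neither of them, as sketched, produces the required contradiction. Your proposal therefore halts exactly at the key difficulty.

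The paper's argument is genuinely different and supplies precisely the missing ingredient. Fix any non-dictator balanced $f$ and let $f_0$ be a dictator; set $g_f(q) := N_q(f) - N_q(f_0)$, where $N_q(f) = \sum_{y^n}\big(T_\rho f(y^n)\big)^q$. Because $T_\rho f_0$ takes only the two values $\frac{1\pm\rho}{2}$ while $T_\rho f$ contributes only nonnegative coefficients, $g_f(q)$ is a finite exponential sum $\sum_k c_k a_k^q$ whose coefficient sequence (bases sorted decreasingly) is a string of nonnegative integers interrupted by at most two negative entries, hence has at most four sign changes. A classical theorem of Laguerre --- a generalized Descartes rule of signs for exponential sums --- therefore bounds the number of real zeros of $g_f$ by four. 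One then reads off $g_f(0)=g_f(1)=0$ (both come from the balancedness of $f$ and $f_0$), $g_f(2)\le 0$ (Witsenhausen, Corollary~\ref{thm:wit-stability}), and $g_f(\pm q)>0$ for $q$ large. These constraints pin the four zeros to $0$, $1$, some $q_4\le 2$, and some $q_1\ge 2$, whence $\{q\ge 1: g_f(q)\le 0\}=[\max\{q_4,1\},q_1]$, an interval containing $2$. Intersecting over all non-dictator $f$ yields $[q_{\min},q_{\max}]$ and the interval property follows. This per-function root-counting argument, applied before aggregating over $f$, is what makes the structure result go through; nothing in your proposal substitutes for it, and without it the lemma remains unproved. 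The symmetric case in the paper follows by the identical Laguerre argument applied to $\breve{N}_q(f) := N_q(f)+N_q(1-f)$, not merely by invoking the anti-symmetry identity~\eqref{eq:stability-asym-func} as your last sentence suggests (that identity pertains to $\mathrm{Maj}_1$, not to an arbitrary competitor $f$).
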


\begin{proof}[Proof Sketch of Lemma \ref{lem:BarnesOzgur}] For any
$q\in\mathbb{R}$ (not necessarily greater than or equal to $1$), define 
\begin{equation}
N_{q}(f):= 2^n\,\big\|T_\rho f\big\|_q^q= \sum_{y^{n}\in\{0,1\}^{n}}\big(T_{\rho}f(y^{n})\big)^{q}.\label{eq:alphanorm}
\end{equation}
Let $f_{0}$ be a dictator  function,
e.g., $f_{0}=\mathrm{Maj}_{1}$. Define 
\begin{equation}
g_{f}(q):=N_{q}(f)-N_{q}(f_{0}).\label{eq:g}
\end{equation}
By using a result due to Laguerre \cite{laguerre1884theorie}, one can find that  the sum of exponentials
   $g_{f}(q)$    has at most
four roots. 
 Observe that 
\begin{align}
g_{f}(0) & =0,\quad g_{f}(1)   =0,\quad g_{f}(2)  \le0, \quad\mbox{and}\\*
g_{f}(q),\ g_{f}(-q) & >0\;\;\textrm{ for sufficiently large }q.
\end{align}
From these observations, we know that $g_{f}(q)$ has a root  at $q_{1}\ge2$,
another at $q_{2}=1$, and another at $q_{3}=0$. \enlargethispage{-3\baselineskip} Moreover, the remaining
  root $q_{4}$ satisfies $q_{4}\le2$. Hence, $g_{f}(q)\le0$
for all $q$ in the interval $[\max\{q_{4},1\},q_{1}]$; see Fig.~\ref{fig:gf}. Taking the
intersection of  these intervals for all non-dictator  functions $f$, we
  obtain the interval $\left[q_{\min},q_{\max}\right]$, where  $q_{\min}$
and $q_{\max}$ are  the desired thresholds. The symmetric case follows
similarly.  \end{proof}

\begin{figure}[!ht]
\centering 
\includegraphics[width=0.85\columnwidth]{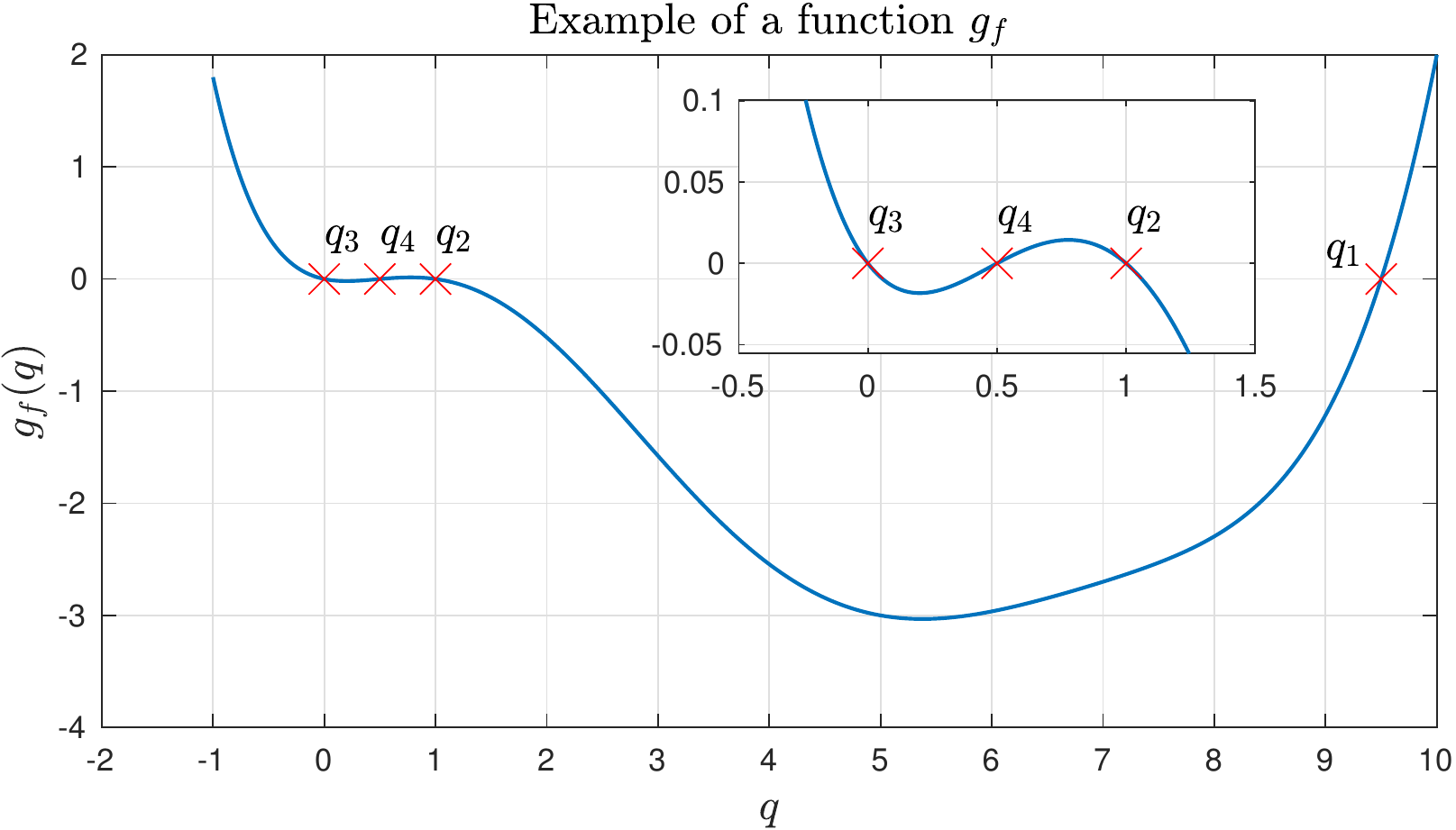}
\caption{\label{fig:gf}Example of a function $g_f$ in \eqref{eq:g}.} 
\end{figure}

\begin{figure}[!ht]
\centering \centering \tikzset{every picture/.style={line width=0.75pt}}
\begin{tikzpicture}[x=0.75pt,y=0.75pt,yscale=-1,xscale=1,scale=0.9]
\draw    (248,2616) -- (419,2616) -- (630,2616) ;
\draw [shift={(632,2616)}, rotate = 180] [color={rgb, 255:red, 0; green, 0; blue, 0 }  ][line width=0.75]    (10.93,-3.29) .. controls (6.95,-1.4) and (3.31,-0.3) .. (0,0) .. controls (3.31,0.3) and (6.95,1.4) .. (10.93,3.29)   ; 
\draw    (302,2611.67) .. controls (342,2581.67) and (451,2586.14) .. (490,2612.14) ; 
\draw    (269,2649.67) .. controls (273,2663) and (292,2662) .. (296,2649.67) ; 
\draw    (502,2646) .. controls (523,2659.86) and (571,2660) .. (615,2660) ; 
\draw    (412,2683) .. controls (451.6,2653.3) and (494.14,2691.23) .. (533.8,2662.88) ;
\draw [shift={(535,2662)}, rotate = 503.13] [color={rgb, 255:red, 0; green, 0; blue, 0 }  ][line width=0.75]    (10.93,-3.29) .. controls (6.95,-1.4) and (3.31,-0.3) .. (0,0) .. controls (3.31,0.3) and (6.95,1.4) .. (10.93,3.29)   ; 
\draw    (395,2682) .. controls (371.36,2656.39) and (316.67,2690.93) .. (282.54,2664.26) ;
\draw [shift={(281,2663)}, rotate = 400.46000000000004] [color={rgb, 255:red, 0; green, 0; blue, 0 }  ][line width=0.75]    (10.93,-3.29) .. controls (6.95,-1.4) and (3.31,-0.3) .. (0,0) .. controls (3.31,0.3) and (6.95,1.4) .. (10.93,3.29)   ;
\draw (336,2625) node [anchor=north west][inner sep=0.75pt]    {$2$}; 
\draw (336,2608.82) node [anchor=north west][inner sep=0.75pt]   [align=left] {$\times$}; 
\draw (612,2619.67) node [anchor=north west][inner sep=0.75pt]    {$q$}; 
\draw (479,2621.67) node [anchor=north west][inner sep=0.75pt]  [color={rgb, 255:red, 208; green, 2; blue, 27 }  ,opacity=1 ]  {$q_{\max}$}; 
\draw (282,2621.67) node [anchor=north west][inner sep=0.75pt]  [color={rgb, 255:red, 208; green, 2; blue, 27 }  ,opacity=1 ]  {$q_{\min}$}; 
\draw (262,2625) node [anchor=north west][inner sep=0.75pt]    {$1$}; 
\draw (262,2608.82) node [anchor=north west][inner sep=0.75pt]   [align=left] {$\times$}; 
\draw (488,2608.82) node [anchor=north west][inner sep=0.75pt]  [color={rgb, 255:red, 208; green, 2; blue, 27 }  ,opacity=1 ] [align=left] {$\times$}; 
\draw (288,2608.82) node [anchor=north west][inner sep=0.75pt]  [color={rgb, 255:red, 208; green, 2; blue, 27 }  ,opacity=1 ] [align=left] {$\times$}; 
\draw (331,2566.67) node [anchor=north west][inner sep=0.75pt]   [align=left] {Dictators optimal}; 
\draw (326,2682.67) node [anchor=north west][inner sep=0.75pt]   [align=left] {Dictators not optimal};
\end{tikzpicture} \caption{\label{fig:Illustration-of-Lemma}Illustration of Lemma \ref{lem:BarnesOzgur}.}
\end{figure}
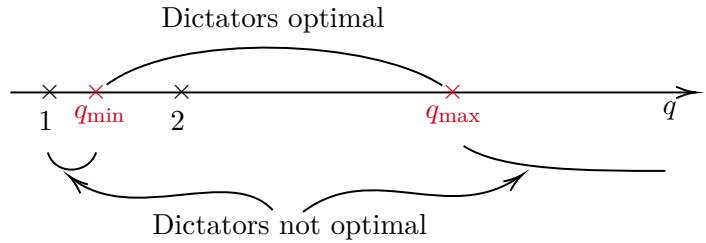

This lemma is illustrated in Fig.~\ref{fig:Illustration-of-Lemma}.

\enlargethispage{-\baselineskip}
\begin{remark} \label{rmk:imp_BO}
This lemma has several important implications. 
\begin{enumerate}
\item[(a)] Firstly, this lemma implies that both the Courtade--Kumar conjecture
and the Li--M\'edard conjecture are equivalent to the statements that $q_{\min}=1$ for
the asymmetric version and $\breve{q}_{\min}=1$ for the symmetric
version. Hence, the Courtade--Kumar conjecture and the Li--M\'edard
conjecture are also equivalent (to each other). 
\item[(b)] Secondly, it also implies that the Mossel--O'Donnell conjecture is
equivalent to the statements that $q_{\max}\ge9$ for the asymmetric
version and $\breve{q}_{\max}\ge9$ for the symmetric version.
On the other hand, from Proposition  \ref{prop:counterexample},
we see that $\max\{q_{\max},\breve{q}_{\max}\}<10$. 
\item[(c)] Lastly, by Lemma~\ref{lem:BarnesOzgur} and Theorem \ref{thm:mos-stability}
(i.e., Conjecture~\ref{conj:SymmetricStability} holds for $q=3$),
Conjecture~\ref{conj:SymmetricStability} also holds for any $q\in[2,3]$.
In other words, Conjecture~\ref{conj:SymmetricStability} is only
open for $q\in[1,2)\cup(3,9]$. 
\end{enumerate}
\end{remark}
Combining all points in Remark~\ref{rmk:imp_BO} yields that $2\le q_{\max}<10$ and $3\le \breve{q}_{\max}<10$.
If Conjectures~\ref{conj:AsymmetricStability} and~\ref{conj:SymmetricStability}
hold, then the estimates of $q_{\max}$ and $\breve{q}_{\max}$
can be improved to $9\le q_{\max},\breve{q}_{\max}<10$, as
shown in Fig.~\ref{fig:Illustration-of-conjectures}. 


\begin{figure}[!ht]
\centering \tikzset{every picture/.style={line width=0.75pt}}
\begin{tikzpicture}[x=0.75pt,y=0.75pt,yscale=-1,xscale=1,scale=0.9] 
\draw    (270,2400) -- (441,2400) -- (652,2400) ;
\draw [shift={(654,2400)}, rotate = 180] [color={rgb, 255:red, 0; green, 0; blue, 0 }  ][line width=0.75]    (10.93,-3.29) .. controls (6.95,-1.4) and (3.31,-0.3) .. (0,0) .. controls (3.31,0.3) and (6.95,1.4) .. (10.93,3.29)   ; 
\draw    (296,2392.14) .. controls (336,2362.14) and (473,2370.14) .. (512,2396.14) ; 
\draw    (521,2392.14) .. controls (533,2376.14) and (592,2374.67) .. (633,2372.67) ;
\draw (358,2409) node [anchor=north west][inner sep=0.75pt]    {$2$}; 
\draw (358,2392.82) node [anchor=north west][inner sep=0.75pt]   [align=left] {$\times$}; 
\draw (634,2403.67) node [anchor=north west][inner sep=0.75pt]    {$q$}; 
\draw (501,2405.67) node [anchor=north west][inner sep=0.75pt]  [color={rgb, 255:red, 208; green, 2; blue, 27 }  ,opacity=1 ]  {$q_{\max}$}; 
\draw (234,2410.67) node [anchor=north west][inner sep=0.75pt]  [color={rgb, 255:red, 208; green, 2; blue, 27 }  ,opacity=1 ]  {$q_{\min} =1$}; 
\draw (284,2392.82) node [anchor=north west][inner sep=0.75pt]  [color={rgb, 255:red, 208; green, 2; blue, 27 }  ,opacity=1 ] [align=left] {$\times$}; 
\draw (510,2392.82) node [anchor=north west][inner sep=0.75pt]  [color={rgb, 255:red, 208; green, 2; blue, 27 }  ,opacity=1 ] [align=left] {$\times$}; 
\draw (482,2392.82) node [anchor=north west][inner sep=0.75pt]   [align=left] {$\times$}; 
\draw (483,2410) node [anchor=north west][inner sep=0.75pt]    {$9$}; 
\draw (546,2392.82) node [anchor=north west][inner sep=0.75pt]   [align=left] {$\times$}; 
\draw (547,2410) node [anchor=north west][inner sep=0.75pt]    {$10$}; 
\draw (330,2346.67) node [anchor=north west][inner sep=0.75pt]   [align=left] {Dictators optimal}; 
\draw (529,2350.67) node [anchor=north west][inner sep=0.75pt]   [align=left] {Dictators not optimal};
\end{tikzpicture} \caption{\label{fig:Illustration-of-conjectures}Illustration of the range
of $q$ for the optimality of dictator functions if Conjectures~\ref{conj:AsymmetricStability}
and~\ref{conj:SymmetricStability}
are true.}
\end{figure}
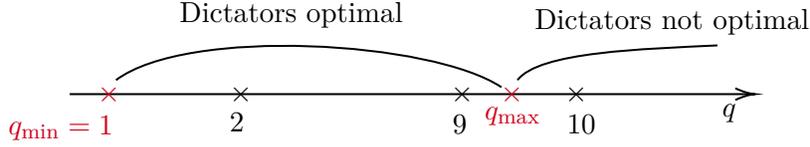

\section{Extreme Cases of the Correlation Coefficient}

\label{sec:extreme}
To better understand the max $q$-stabilities, and also to connect
them to several well-known concepts  
in the analysis of Boolean functions, we first focus our attention on the extreme cases
in which the correlation coefficient
$\rho$ tends to~$0$ or~$1$, but the   dimension (or blocklength) $n$ is kept fixed. To illustrate the intuition as to why some results hold, we introduce
the concepts of {\em influences} and   {\em edge-isoperimetric inequalities}. 


\subsection{Influences} \label{sec:influ}
For a vector $x^{n}\in\{0,1\}^{n}$, we denote the vector with the $i^{\mathrm{th}}$ bit flipped as $(x^n)^{\oplus i}:=(x_{1},\ldots,x_{i-1},1-x_{i},x_{i+1},\ldots,x_{n})$. Denote the length-$(n-1)$ with the $i^{\mathrm{th}}$ component removed as  $x^{\backslash i }:= (x_1, \ldots, x_{i-1},x_{i+1}, \ldots x_n)$. 


\begin{definition}\label{def:influence} The \emph{influence  of coordinate $i \in [n]$} on a Boolean function  $f:\{0,1\}^{n}\to\{0,1\}$
is defined as
\begin{equation}
\mathbf{I}_{i}[f]:=\Pr\big(f(X^{n})\neq f((X^n)^{\oplus i})\big),
\end{equation}
where $X^{n}\sim\Unif\{0,1\}^{n}$. \end{definition}

Let $f$ be a Boolean function that depends only on $x^{\backslash i} $. This 
means that the value of $f$ evaluated at every $x^n$ is independent of the $i^{\mathrm{th}}$  component
$x_{i}$. For such an $f$, clearly, $\mathbf{I}_{i}[f]=0$. On the
other hand, if $f$  depends only on the $i^{\mathrm{th}}$ component, i.e., the
dictator functions $f(x^{n})=x_{i}$ or $1-x_{i}$, then $\mathbf{I}_{i}[f]=1$.
Hence, the influence of coordinate $i$ measures how much a function
is influenced by the $i^{\mathrm{th}}$ coordinate of the input; this coincides
with the literal meaning of ``influence''.  Furthermore, the influence
can be also expressed in terms of the discrete derivative operator
 as follows:

\begin{definition} Let $x^{i\mapsto b}:=(x_{1},\ldots,x_{i-1},b,x_{i+1},\ldots,x_{n})$. The $i^{\mathrm{th}}$ \emph{discrete derivative operator}
$D_{i}$ maps a function $f:\{0,1\}^{n}\to\mathbb{R}$ to the function
$D_{i}f:\{0,1\}^{n}\to\mathbb{R}$ defined as
\begin{equation}
D_{i}f(x^{n}):=f(x^{i\mapsto1})-f(x^{i\mapsto0}).
\end{equation}
\end{definition}

One observes that 
\begin{equation}
\mathbf{I}_{i}[f]=\mathbb{E}\big[D_{i}f(X^{n})^{2}\big]= \|D_{i}f \|_{2}^{2}.\label{eq:Infi-Di}
\end{equation}
This formula enables us to generalize the definition of the influence
from a Boolean function to an arbitrary real-valued function defined
on $\{0,1\}^{n}$; see \citet{ODonnell14analysisof}. We do not discuss
this generalization here, since we only mainly focus on Boolean functions.

\begin{definition} \label{def:total_influence} The \emph{total influence} (or \emph{average sensitivity})
of a Boolean function  $f:\{0,1\}^{n}\to\{0,1\}$ is defined as
\begin{equation}
\mathbf{I}[f]:=\sum_{i=1}^{n}\mathbf{I}_{i}[f].
\end{equation}
\end{definition}

\enlargethispage{\baselineskip}
The quantities $D_{i}f$, $\mathbf{I}_{i}[f]$, and $\mathbf{I}[f]$
admit the following Fourier-analytic representations. 

\begin{theorem} For a Boolean function  $f:\{0,1\}^{n}\to\{0,1\}$ and $i\in[n]$, 
\begin{align}
D_{i}f(x^{n}) & =-2\sum_{\calS\subset[n]:\calS\ni i}\hat{f}_{\calS}\cdot\chi_{\calS\backslash\{i\}}(x^{n}),\label{eq:Di}\\
\mathbf{I}_{i}[f] & =4\sum_{\calS\subset[n]:\calS\ni i}\hat{f}_{\calS}^{2}, \qquad\mbox{and}\label{eq:Infi}\\
\mathbf{I}[f] & =4\sum_{\calS\subset[n]}|\calS|\hat{f}_{\calS}^{2}=4\sum_{k=0}^{n}k\cdot\mathbf{W}_{k}[f].\label{eq:InfluenceFourier}
\end{align}
where  $\mathbf{W}_{k}[f]$ denotes the  degree-$k$ Fourier weight of $f$ defined in \eqref{eq:NICD-weight}. \end{theorem}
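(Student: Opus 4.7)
The plan is to prove the three identities in sequence, with each building on its predecessor. The central tool throughout will be the Fourier expansion $f(x^n) = \sum_{\calS \subset [n]} \hat{f}_\calS \chi_\calS(x^n)$ together with the multiplicative structure of the basis functions, namely the factorization $\chi_\calS(x^n) = (-1)^{x_i} \chi_{\calS \setminus \{i\}}(x^n)$ whenever $i \in \calS$ (since $\chi_\calS(x^n) = \prod_{j \in \calS}(-1)^{x_j}$).

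First I would establish \eqref{eq:Di} by a direct computation. Substituting the Fourier expansion into $D_i f(x^n) = f(x^{i \mapsto 1}) - f(x^{i \mapsto 0})$ and applying the factorization term-by-term shows that every term with $i \notin \calS$ contributes nothing (since $\chi_\calS$ does not depend on $x_i$ in that case), while every term with $i \in \calS$ contributes $\hat{f}_\calS \cdot (-\chi_{\calS \setminus \{i\}}(x^n) - \chi_{\calS \setminus \{i\}}(x^n)) = -2 \hat{f}_\calS \chi_{\calS \setminus \{i\}}(x^n)$. Collecting the surviving contributions yields the claimed formula.

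Next I would prove \eqref{eq:Infi} by combining \eqref{eq:Di} with Parseval's theorem. Re-indexing the sum in \eqref{eq:Di} via $\calT := \calS \setminus \{i\}$ reveals that $D_i f$ has Fourier coefficients $\widehat{D_i f}_\calT = -2 \hat{f}_{\calT \cup \{i\}}$ when $i \notin \calT$, and $\widehat{D_i f}_\calT = 0$ otherwise. Using the representation $\mathbf{I}_i[f] = \|D_i f\|_2^2$ recorded in \eqref{eq:Infi-Di}, Parseval's theorem converts this $L^2$-norm into a sum of squared Fourier coefficients, reproducing $4\sum_{\calS \ni i} \hat{f}_\calS^2$. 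Finally, \eqref{eq:InfluenceFourier} follows by summing \eqref{eq:Infi} over $i \in [n]$ and swapping the order of summation: each fixed $\calS$ contributes $|\calS|$ times, yielding $4\sum_\calS |\calS| \hat{f}_\calS^2$, which then regroups by cardinality into $4\sum_{k=0}^n k\, \mathbf{W}_k[f]$ through the definition in \eqref{eq:NICD-weight}.

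Since every step reduces to an algebraic manipulation of the Fourier basis or a single invocation of Parseval's theorem, I do not anticipate any substantive obstacle. The only point requiring modest care is the bookkeeping in the re-indexing $\calS \leftrightarrow \calT = \calS \setminus \{i\}$ when applying Parseval to $D_i f$, to ensure the bijection is handled correctly and the factor of $4$ (rather than $2$) emerges from squaring the coefficient $-2$.
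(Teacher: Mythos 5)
Your proof is correct and follows essentially the same route as the paper: expand $f$ in the Fourier basis, observe that $D_i\chi_\calS$ vanishes when $i\notin\calS$ and equals $-2\chi_{\calS\setminus\{i\}}$ when $i\in\calS$, then pass to \eqref{eq:Infi} via $\mathbf{I}_i[f]=\|D_if\|_2^2$ and orthonormality of the $\chi_\calS$ (your appeal to Parseval is equivalent to the paper's direct use of $\bbE[\chi_\calS\chi_\calT]=\bone\{\calS=\calT\}$), and finally sum over $i$ and regroup by $|\calS|$. The only difference is that you spell out the re-indexing of $D_if$'s Fourier coefficients more explicitly, which the paper leaves implicit.
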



\begin{proof} Since $D_{i}$ is a linear operator,  \eqref{eq:Di}
follows by expressing $f$ in terms of its Fourier coefficients, and then
applying the following identity 
\begin{equation}
D_{i} \chi_{\calS}(x^{n})=\begin{cases}
-2\,\chi_{\calS\backslash\{i\}}(x^{n}) & i\in \calS\vspace{.03in}\\
0 & i\notin \calS
\end{cases}.
\end{equation}
The identity in~\eqref{eq:Infi} follows from \eqref{eq:Infi-Di} and~\eqref{eq:Di}, and  the fact that for any sets $\calS,\calT\subset [n]$, 
\begin{equation}
\bbE[\chi_{\calS}(X^n)\chi_{\calT}(X^n)]=\bone\{\calS=\calT\}. \label{eq:inner_prod_chi}
\end{equation}
The identity in~\eqref{eq:InfluenceFourier} follows from \eqref{eq:Infi} and Definition \ref{def:total_influence}.
\end{proof}

\enlargethispage{-\baselineskip}
The quantities $\mathbf{I}_{i}[f]$ and $\mathbf{I}[f]$ also admit
interesting graph-theoretic interpretations. Consider the  undirected 
graph in which the vertices consist of all vectors in $\{0,1\}^{n}$,
and two vertices $x^{n},y^{n}\in\{0,1\}^{n}$ are joined 
by an edge if the Hamming distance between them is exactly $1$, i.e.,
$d_{\mathrm{H}}(x^{n},y^{n})=1$. This graph is known as the {\em Hamming graph}; see Fig.~\ref{fig:ham_graph} for the Hamming graph when $n=3$.  

\begin{figure}
\centering

\tikzset{every picture/.style={line width=0.75pt}} 

\begin{tikzpicture}[x=0.75pt,y=0.75pt,yscale=-1,xscale=1,scale=0.5]

\draw [line width=3]    (424,2799.44) -- (424,2926.61) ;
\draw    (424,2799.44) -- (370.61,2852.84) ;
\draw    (424,2926.61) -- (370.61,2980) ;
\draw    (370.61,2980) -- (246.03,2980) ;
\draw [line width=3]    (370.61,2852.84) -- (370.61,2980) ;
\draw    (370.61,2852.84) -- (246.03,2852.84) ;
\draw    (299.42,2799.44) -- (246.03,2852.84) ;
\draw    (424,2799.44) -- (299.42,2799.44) ;
\draw [line width=3]    (246.03,2852.84) -- (246.03,2980) ;
\draw [color={rgb, 255:red, 0; green, 0; blue, 0 }  ,draw opacity=1 ][line width=3]  [dash pattern={on 7.88pt off 4.5pt}]  (299.42,2799.44) -- (299.42,2928.84) ;
\draw  [dash pattern={on 4.5pt off 4.5pt}]  (301.65,2926.61) -- (246.03,2982.23) ;
\draw  [dash pattern={on 4.5pt off 4.5pt}]  (299.42,2926.61) -- (424,2926.61) ;
\draw  [fill={rgb, 255:red, 208; green, 2; blue, 27 }  ,fill opacity=1 ] (359.08,2851.04) .. controls (359.08,2844.57) and (364.21,2839.33) .. (370.54,2839.33) .. controls (376.87,2839.33) and (382,2844.57) .. (382,2851.04) .. controls (382,2857.52) and (376.87,2862.76) .. (370.54,2862.76) .. controls (364.21,2862.76) and (359.08,2857.52) .. (359.08,2851.04) -- cycle ;
\draw  [fill={rgb, 255:red, 255; green, 255; blue, 255 }  ,fill opacity=1 ] (290.19,2926.61) .. controls (290.19,2920.13) and (295.32,2914.89) .. (301.65,2914.89) .. controls (307.98,2914.89) and (313.11,2920.13) .. (313.11,2926.61) .. controls (313.11,2933.08) and (307.98,2938.32) .. (301.65,2938.32) .. controls (295.32,2938.32) and (290.19,2933.08) .. (290.19,2926.61) -- cycle ;
\draw  [fill={rgb, 255:red, 208; green, 2; blue, 27 }  ,fill opacity=1 ] (234.57,2852.84) .. controls (234.57,2846.37) and (239.7,2841.12) .. (246.03,2841.12) .. controls (252.35,2841.12) and (257.48,2846.37) .. (257.48,2852.84) .. controls (257.48,2859.31) and (252.35,2864.55) .. (246.03,2864.55) .. controls (239.7,2864.55) and (234.57,2859.31) .. (234.57,2852.84) -- cycle ;
\draw  [fill={rgb, 255:red, 208; green, 2; blue, 27 }  ,fill opacity=1 ] (287.96,2799.44) .. controls (287.96,2792.97) and (293.09,2787.73) .. (299.42,2787.73) .. controls (305.75,2787.73) and (310.88,2792.97) .. (310.88,2799.44) .. controls (310.88,2805.92) and (305.75,2811.16) .. (299.42,2811.16) .. controls (293.09,2811.16) and (287.96,2805.92) .. (287.96,2799.44) -- cycle ;
\draw  [fill={rgb, 255:red, 255; green, 255; blue, 255 }  ,fill opacity=1 ] (234.57,2980) .. controls (234.57,2973.53) and (239.7,2968.28) .. (246.03,2968.28) .. controls (252.35,2968.28) and (257.48,2973.53) .. (257.48,2980) .. controls (257.48,2986.47) and (252.35,2991.72) .. (246.03,2991.72) .. controls (239.7,2991.72) and (234.57,2986.47) .. (234.57,2980) -- cycle ;
\draw  [fill={rgb, 255:red, 255; green, 255; blue, 255 }  ,fill opacity=1 ] (359.15,2980) .. controls (359.15,2973.53) and (364.28,2968.28) .. (370.61,2968.28) .. controls (376.94,2968.28) and (382.07,2973.53) .. (382.07,2980) .. controls (382.07,2986.47) and (376.94,2991.72) .. (370.61,2991.72) .. controls (364.28,2991.72) and (359.15,2986.47) .. (359.15,2980) -- cycle ;
\draw  [fill={rgb, 255:red, 255; green, 255; blue, 255 }  ,fill opacity=1 ] (412.54,2926.61) .. controls (412.54,2920.13) and (417.67,2914.89) .. (424,2914.89) .. controls (430.33,2914.89) and (435.46,2920.13) .. (435.46,2926.61) .. controls (435.46,2933.08) and (430.33,2938.32) .. (424,2938.32) .. controls (417.67,2938.32) and (412.54,2933.08) .. (412.54,2926.61) -- cycle ;
\draw  [fill={rgb, 255:red, 208; green, 2; blue, 27 }  ,fill opacity=1 ] (412.54,2799.44) .. controls (412.54,2792.97) and (417.67,2787.73) .. (424,2787.73) .. controls (430.33,2787.73) and (435.46,2792.97) .. (435.46,2799.44) .. controls (435.46,2805.92) and (430.33,2811.16) .. (424,2811.16) .. controls (417.67,2811.16) and (412.54,2805.92) .. (412.54,2799.44) -- cycle ;

\draw (257.48,2980) node [anchor=north west][inner sep=0.75pt]  [font=\normalsize]  {$000$};
\draw (250.97,2859.02) node [anchor=north west][inner sep=0.75pt]  [font=\normalsize]  {$100$};
\draw (430,2932.32) node [anchor=north west][inner sep=0.75pt]  [font=\normalsize]  {$011$};
\draw (313.11,2926.61) node [anchor=north west][inner sep=0.75pt]  [font=\normalsize]  {$010$};
\draw (382.07,2980) node [anchor=north west][inner sep=0.75pt]  [font=\normalsize]  {$001$};
\draw (429,2803.32) node [anchor=north west][inner sep=0.75pt]  [font=\normalsize]  {$111$};
\draw (310.88,2799.44) node [anchor=north west][inner sep=0.75pt]  [font=\normalsize]  {$110$};
\draw (376,2857.04) node [anchor=north west][inner sep=0.75pt]  [font=\normalsize]  {$101$};
\draw (255,2785.6) node [anchor=north west][inner sep=0.75pt]    {$\mathcal{A}$};
\draw (190,2901.6) node [anchor=north west][inner sep=0.75pt]    {$\partial \mathcal{A}$};
\end{tikzpicture}
\caption{Hamming graph for $n=3$. For the dictator function $\mathrm{Maj}_1(x^3) = x_1$, the set $\calA = \{(1,0,0) ,(1,0,1), (1,1,0), (1,1,1) \}$ (indicated in red balls). The edge boundary $\partial\calA$ of $\calA$ is indicated as the  four thick edges. Each boundary edge is a dimension-$1$ edge.  }
\label{fig:ham_graph}
\end{figure}


\begin{definition} For a set $\mathcal{A}\subset\{0,1\}^{n}$, define
its \emph{edge boundary} $\partial\mathcal{A}$ as the set of edges
in the Hamming graph such that one of its endpoints belongs to $\mathcal{A}$
while the other one belongs to $\mathcal{A}^{\rmc}$. Every edge that belongs to $\partial\calA$ is called a {\em boundary edge}. An boundary edge $\{x^n, y^n\} \in \partial \calA$ is known as a  {\em dimension-$i$ edge} if $y^n = (x^n)^{\oplus i}$, i.e., $x^n$ and $y^n$ are identical except in their $i^{\mathrm{th}}$ coordinates. \end{definition}

For a set $\mathcal{A}\subset\{0,1\}^{n}$, one observes the following
facts. 
\begin{enumerate}
\item \label{item:fraction_dim} The fraction of   dimension-$i$
 edges that are boundary edges of $\mathcal{A}$ 
in the Hamming graph is equal to $\mathbf{I}_{i}[\bone_{\mathcal{A}}]$.
\item \label{item:fraction} The   fraction of edges
in the Hamming graph that are boundary edges of $\calA$ is equal to $\frac{1}{n}\mathbf{I}[\bone_{\mathcal{A}}]$. This implies that  $|\partial\mathcal{A}|=2^{n-1}\mathbf{I}[\bone_{\mathcal{A}}]$,
since the total number of edges in the Hamming graph is $n\,2^{n-1}$. 
\end{enumerate}

\begin{example}
Let $n=3$. The Hamming graph is shown  in Fig.~\ref{fig:ham_graph}. This graph has $3\cdot 2^{3-1}=12$ edges. Consider the dictator function $\mathrm{Maj}_1(x^3) = x_1$. The support of $f$ is the set $\calA =  \{(1,0,0) ,(1,0,1), (1,1,0), (1,1,1) \}$. Both $\calA$ and $\partial\calA$ are  indicated in Fig.~\ref{fig:ham_graph} and $|\partial\calA|=4$. For this dictator function, $\bI_1[\bone_{\calA}]=1$ and $\bI_2[\bone_{\calA}]=\bI_3[\bone_{\calA}]=0$ (as discussed after Definition~\ref{def:influence}). Note from Fig.~\ref{fig:ham_graph} that there are four dimension-$1$ edges and no  dimension-$2 $ and dimension-$3$ edges. Thus, the {\em fractions} of dimension-$1$, dimension-$2$ and dimension-$3$ edges that are boundary edges of~$\calA$ are $1$, $0$,  and $0$   respectively, corroborating Fact~\ref{item:fraction_dim}. Furthermore, $\bI[\bone_{\calA}]= \sum_{i=1}^3\bI_i[ \bone_\calA ]=1$ and the fraction of edges that belong to $\partial \calA$ is $1/3=4/12$, corroborating Fact~\ref{item:fraction}.
\end{example}

\subsection{Edge-Isoperimetric Inequalities}\label{sec:edge}
From Fact \ref{item:fraction}, we see that  the total influence of $f$
is related to the cardinality of the edge boundary  of its support set $\calA$.  A classical result due to \citet{harper1964optimal} quantifies this relation via the   
so-called {\em edge-isoperimetric inequality}.


\begin{theorem}[Edge-isoperimetric inequality]  \label{thm:EdgeIsoper}
For $f:\{0,1\}^{n}\to\{0,1\}$ with $a=\min\{\mathbb{E}[f],1-\mathbb{E}[f]\}$,
\begin{equation}
\mathbf{I}[f]\ge 2a\, \log \Big(\frac{1}{a}\Big).\label{eq:isoperi}
\end{equation}
\end{theorem}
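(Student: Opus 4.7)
The plan is to prove the inequality by induction on the dimension $n$. For the base case $n=1$, the Boolean functions on $\{0,1\}$ are the two constants (for which $a=0$ and both sides vanish) and the two dictators $x\mapsto x_1$ and $x\mapsto 1-x_1$ (for which $a=1/2$ and $\mathbf{I}[f]=1 = 2\cdot\tfrac{1}{2}\log_2 2$), so the inequality holds, with equality in the dictator case.

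For the inductive step, given $f:\{0,1\}^n\to\{0,1\}$, I would split $f$ along its last coordinate by setting $f_j(x^{n-1}) := f(x^{n-1},j)$ for $j\in\{0,1\}$, and writing $a_j := \mathbb{E}[f_j]$, so that $a = \mathbb{E}[f] = (a_0+a_1)/2$. Since $\mathbf{I}[f]=\mathbf{I}[1-f]$, I may assume without loss of generality that $a\le 1/2$. A direct computation---for instance using \eqref{eq:InfluenceFourier} after splitting the sum over subsets $\calS\subset[n]$ according to whether they contain the index $n$---yields the decomposition
\[
\mathbf{I}[f]=\tfrac{1}{2}\mathbf{I}[f_0]+\tfrac{1}{2}\mathbf{I}[f_1]+\mathbf{I}_n[f],
\]
where $\mathbf{I}_n[f]=\mathbb{E}[|f_0(X^{n-1})-f_1(X^{n-1})|]\ge |a_0-a_1|$. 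Applying the inductive hypothesis separately to $f_0$ and $f_1$ gives $\mathbf{I}[f_j]\ge 2\tilde a_j\log_2(1/\tilde a_j)$ with $\tilde a_j := \min(a_j,1-a_j)$, and the target bound reduces to the scalar inequality
\[
\tilde a_0\log_2\frac{1}{\tilde a_0}+\tilde a_1\log_2\frac{1}{\tilde a_1}+|a_0-a_1|\ge 2a\log_2\frac{1}{a},\qquad \tfrac{a_0+a_1}{2}=a\le \tfrac{1}{2}.
\]

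The principal obstacle is verifying this scalar inequality. In the main case $a_0,a_1\le 1/2$ (so $\tilde a_j = a_j$), writing $\phi(t):=-t\log_2 t$ and using the identity $2a\log_2(1/a)=\phi(a_0+a_1)+(a_0+a_1)$, the target takes the clean form $\phi(a_0)+\phi(a_1)+|a_0-a_1|\ge \phi(a_0+a_1)+(a_0+a_1)$. Assuming $a_0\le a_1$, the difference $G(a_0,a_1)$ of the two sides vanishes at both endpoints $a_0=0$ and $a_0=a_1$, while a brief computation gives $\partial G/\partial a_0 = \log_2(1+a_1/a_0)-2$, whose only interior zero at $a_0=a_1/3$ is a maximum; hence $G\ge 0$ on $[0,a_1]$. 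The complementary case in which one of $a_0,a_1$ exceeds $1/2$ is handled by an analogous univariate analysis exploiting $a_0+a_1\le 1$ and the concavity of $\phi$. A slicker alternative that sidesteps the case analysis is to derive the inequality from the logarithmic Sobolev inequality on the Boolean cube---which tensorizes cleanly from its one-dimensional form---applied to the Boolean $f$: since $f^2=f$, the log-Sobolev inequality bounds $\mathrm{Ent}[f^2]=a\log_2(1/a)$ by a constant multiple of the Dirichlet form $\mathcal{E}(f,f)$, which is itself a constant multiple of $\mathbf{I}[f]$, yielding the target lower bound directly.
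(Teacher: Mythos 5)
Your inductive argument is sound and is, in essence, the classical compression proof of Harper's edge-isoperimetric inequality; the monograph states Theorem~\ref{thm:EdgeIsoper} without proof, citing \citet{harper1964optimal}. The decomposition $\mathbf{I}[f]=\tfrac{1}{2}\mathbf{I}[f_0]+\tfrac{1}{2}\mathbf{I}[f_1]+\mathbf{I}_n[f]$, the bound $\mathbf{I}_n[f]=\mathbb{E}|f_0-f_1|\ge|a_0-a_1|$, and the reduction to the scalar inequality are all correct. In the main case your derivative computation $\partial G/\partial a_0=\log_2(1+a_1/a_0)-2$ is exactly right, and since the extra term $\phi(1-a_1)$ in the complementary case ($a_1>1/2$) does not involve $a_0$, the same derivative governs that case as well, with endpoint values $G(0)=\phi(1-a_1)-\phi(a_1)\ge 0$ and $G(1-a_1)=2\phi(1-a_1)-2(1-a_1)\ge 0$; so your ``analogous univariate analysis'' does close.

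The ``slicker alternative'' via the log-Sobolev inequality is a genuine error and should be dropped. For Boolean $f$ with mean $a\le 1/2$, one has $\Ent(f^2)=\Ent(f)=a\ln(1/a)$ (natural logarithm), while the sharp $2$-log-Sobolev inequality on the hypercube with constant $C=2$ gives $\Ent(f^2)\le 2\,\mathcal{E}_n(f,f)=\tfrac12\mathbf{I}[f]$ by \eqref{eq:Eff}, hence only $\mathbf{I}[f]\ge 2a\ln(1/a)$. This falls short of the target $\mathbf{I}[f]\ge 2a\log_2(1/a)$ by a multiplicative factor of $\log_2\rme\approx 1.44$. The monograph makes precisely this point in Section~\ref{sec:stronger} around \eqref{eq:FIedgeisoper}: the log-Sobolev inequality is sharp over \emph{all} nonnegative $f$, and the Boolean restriction is exactly what buys the missing factor. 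Keep the induction; discard the shortcut.
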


This inequality can be seen as a Boolean function version of the \emph{log-Sobolev inequality}. The relationship between this edge-isoperimetric inequality and the real-valued function version of log-Sobolev inequalities will be discussed extensively in Section \ref{sec:log_sobolev}. 

This inequality in \eqref{eq:isoperi} is sharp for $a=2^{-k}$ with $1\le k\le n$, since
for this case, the indicator function of an $(n-k)$-subcube attains
the lower bound. 
If $a$, a dyadic rational, is the mean of $f$, $\mathbf{I}[f]$ is minimized when $f$ is the
indicator of a lexicographic set of size $2^{n}a$ (cf.\ Section~\ref{sec:subcubes}). The edge-isoperimetric
inequality will be used to resolve the extreme cases of the max $q$-stability problem
via the following two theorems that establish a connection between the $q$-stability and the total
influence. 

\begin{theorem} \label{thm:connect_qstab_TI} For $f:\{0,1\}^{n}\to\{0,1\}$, 
\begin{equation}
\mathbf{S}^{(2)}_{\rho}[f]=\sum_{\calS\subset[n]}\rho^{2|\calS|}\ \hat{f}_{\calS}^{2}=\sum_{k=0}^{n}\rho^{2k}\ \mathbf{W}_{k}[f].
\end{equation}
\end{theorem}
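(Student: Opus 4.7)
The plan is to expand $f$ in its Fourier basis, apply the noise operator $T_\rho$ term by term, square, and then take expectations, exploiting the orthonormality of the Fourier basis $\{\chi_\calS\}_{\calS\subset[n]}$.

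First, I would establish the key auxiliary identity that the Fourier characters are eigenfunctions of $T_\rho$, namely
\begin{equation}
T_\rho \chi_\calS(y^n) = \rho^{|\calS|}\, \chi_\calS(y^n) \qquad \text{for all } \calS\subset[n],\ y^n\in\{0,1\}^n.
\end{equation}
To see this, note that the DSBS factorizes coordinatewise, so for a single coordinate with $(X,Y)\sim\pi_{XY}$ one computes directly
\begin{equation}
\bbE\big[(-1)^{X}\,\big|\, Y=y\big] = \tfrac{1+\rho}{2}(-1)^{y} + \tfrac{1-\rho}{2}(-1)^{1-y} = \rho (-1)^{y}.
\end{equation}
Because $X_1,\ldots,X_n$ are conditionally independent given $Y^n$ under $\pi_{XY}^n$, the expectation $\bbE[\prod_{i\in\calS}(-1)^{X_i}\mid Y^n = y^n]$ factorizes into $|\calS|$ single-coordinate terms, producing $\rho^{|\calS|}\chi_\calS(y^n)$ as claimed.

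Given this, I would write $f = \sum_\calS \hat f_\calS \chi_\calS$ using its Fourier expansion, apply the (linear) operator $T_\rho$ to obtain $T_\rho f = \sum_\calS \rho^{|\calS|}\hat f_\calS \chi_\calS$, and then compute
\begin{align}
\mathbf{S}^{(2)}_{\rho}[f] &= \bbE_{Y^n}\big[(T_\rho f(Y^n))^2\big] \\
&= \sum_{\calS,\calT\subset[n]} \rho^{|\calS|+|\calT|}\, \hat f_\calS \hat f_\calT\, \bbE\big[\chi_\calS(Y^n)\chi_\calT(Y^n)\big] \\
&= \sum_{\calS\subset[n]}\rho^{2|\calS|}\, \hat f_\calS^{\,2},
\end{align}
where the last step uses the orthonormality relation $\bbE[\chi_\calS(Y^n)\chi_\calT(Y^n)] = \bone\{\calS=\calT\}$ from~\eqref{eq:inner_prod_chi} (with $Y^n\sim\mathrm{Unif}\{0,1\}^n$, since the $Y$-marginal of the DSBS is uniform). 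Regrouping this sum according to $|\calS|=k$ and invoking the definition of the degree-$k$ Fourier weight in~\eqref{eq:NICD-weight} yields the second equality $\sum_{k=0}^n \rho^{2k}\,\mathbf{W}_k[f]$.

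There is essentially no obstacle here beyond the eigenfunction computation; the rest is bookkeeping. Strictly speaking, since the theorem as stated is for Boolean-valued $f$, one could also observe that the proof never uses Boolean-valuedness and the identity extends verbatim to any real-valued $f:\{0,1\}^n\to\bbR$ for which the Fourier expansion and $L^2$ norm are finite, which will be useful in later sections when generalizing to non-Boolean functions in the context of functional inequalities.
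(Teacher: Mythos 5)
Your proof is correct and takes essentially the same route as the paper: both identify the Fourier coefficients of $T_\rho f$ as $\rho^{|\calS|}\hat{f}_\calS$ (you derive this via the eigenfunction property $T_\rho\chi_\calS=\rho^{|\calS|}\chi_\calS$, which the paper simply invokes), then apply the orthonormality relation~\eqref{eq:inner_prod_chi} to $\langle T_\rho f,T_\rho f\rangle$. Your closing observation that Boolean-valuedness of $f$ is never used is correct.
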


\begin{proof}
This theorem follows by \eqref{eq:inner_prod_chi} and the facts that 
\begin{equation}
\mathbf{S}^{(2)}_{\rho}[f]= \langle T_{\rho}f,T_{\rho}f \rangle \qquad  \mbox{and} \qquad(\widehat{T_{\rho}f})_{\calS}= \sum_{\calS\subset[n]} \rho^{|\calS|}\, \hat{f}_{\calS}, \label{eq:FourierTf}
\end{equation}
where $\{(\widehat{T_{\rho}f})_{\calS} \}_{\calS\subset [n]}$ are the Fourier coefficients of $T_\rho f$.
\end{proof}
This theorem implies that 
\begin{align}
 \frac{\mathrm{d}}{\mathrm{d}\rho}\mathbf{S}^{(2)}_{\rho}[f]\Big|_{\rho=1}  &=\frac{\mathbf{I}[f]}{2}, \\
 \frac{\mathrm{d}}{\mathrm{d}\rho}\mathbf{S}^{(2)}_{\rho}[f]\Big|_{\rho=0}  &=0, \qquad\mbox{and}\\
 \frac{\mathrm{d}^2}{\mathrm{d}\rho^2}\mathbf{S}^{(2)}_{\rho}[f]\Big|_{\rho=0} & =2 \,\mathbf{W}_{1}[f].
\end{align}
Theorem~\ref{thm:connect_qstab_TI} pertains to $q=2$. For general $q>1$, the derivatives of $\mathbf{S}^{(q)}_{\rho}[f]$ at $\rho =0$ and  $1$ are given in the following theorem which  is due to \citet{li2021boolean}.

\begin{theorem} For $f:\{0,1\}^{n}\to\{0,1\}$ with mean $a\in(0,1]$, 
\begin{align}
 \frac{\mathrm{d}}{\mathrm{d}\rho}\mathbf{S}^{(q)}_{\rho}[f]\Big|_{\rho=1} & =\frac{q}{4}\ \mathbf{I}[f], \label{eq:1nd_der} \\ 
 \frac{\mathrm{d}}{\mathrm{d}\rho}\mathbf{S}^{(q)}_{\rho}[f]\Big|_{\rho=0}  & =0,  \qquad\mbox{and}\\ 
 \frac{\mathrm{d}^2}{\mathrm{d}\rho^2}\mathbf{S}^{(q)}_{\rho}[f]\Big|_{\rho=0} & =q(q-1)\ a^{q-2}\ \mathbf{W}_{1}[f]. \label{eq:2nd_der}
\end{align}
\end{theorem}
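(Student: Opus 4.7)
The essential tool is the explicit Fourier expansion
\begin{equation}
T_{\rho}f(y^{n})=\sum_{\calS\subset[n]}\rho^{|\calS|}\,\hat{f}_{\calS}\,\chi_{\calS}(y^{n})
\end{equation}
(see \eqref{eq:FourierTf}), which as a finite polynomial in $\rho$ is analytic. Since $f$ has mean $a\in(0,1]$, we have $T_{\rho}f(y^{n})\ge 0$ for $\rho\in[0,1]$ and $T_{0}f\equiv a>0$; consequently $(T_{\rho}f)^{q}$ is continuously differentiable in $\rho$ on all of $[0,1]$ (one-sided at the endpoints), with
\begin{equation}
\frac{\mathrm d}{\mathrm d\rho}(T_{\rho}f)^{q}=q\,(T_{\rho}f)^{q-1}\,\frac{\mathrm d}{\mathrm d\rho}T_{\rho}f,
\end{equation}
and by uniform boundedness we may interchange differentiation and expectation. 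My plan is to handle the three identities by differentiating at the endpoints $\rho=1$ and $\rho=0$, in each case converting everything to Fourier coefficients via \eqref{eq:inner_prod_chi} and then invoking \eqref{eq:InfluenceFourier}/\eqref{eq:NICD-weight}.

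\textbf{Derivative at $\rho=1$.} I will compute $\frac{\mathrm d}{\mathrm d\rho}T_{\rho}f\big|_{\rho=1}=\sum_{\calS}|\calS|\,\hat{f}_{\calS}\,\chi_{\calS}$ by termwise differentiation. Since $f:\{0,1\}^{n}\to\{0,1\}$, we have $f^{q-1}=f$ for $q>1$, so evaluating at $\rho=1$ gives $(T_{1}f)^{q-1}=f^{q-1}=f$. Therefore
\begin{equation}
\frac{\mathrm d}{\mathrm d\rho}\mathbf{S}^{(q)}_{\rho}[f]\Big|_{\rho=1}
=q\,\mathbb{E}\Big[f(Y^{n})\sum_{\calS}|\calS|\,\hat{f}_{\calS}\,\chi_{\calS}(Y^{n})\Big].
\end{equation}
Expanding $f=\sum_{\calT}\hat{f}_{\calT}\chi_{\calT}$ and using $\mathbb{E}[\chi_{\calS}\chi_{\calT}]=\bone\{\calS=\calT\}$, this collapses to $q\sum_{\calS}|\calS|\,\hat{f}_{\calS}^{2}$, which by \eqref{eq:InfluenceFourier} equals $\frac{q}{4}\mathbf{I}[f]$, as required.

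\textbf{Derivatives at $\rho=0$.} Here the cleaner route is a Taylor expansion of $x\mapsto x^{q}$ around the positive constant $x=a=T_{0}f$. Writing $\epsilon(\rho,y^{n}):=T_{\rho}f(y^{n})-a=\sum_{|\calS|\ge 1}\rho^{|\calS|}\hat{f}_{\calS}\chi_{\calS}(y^{n})$ and using the Taylor expansion
\begin{equation}
(a+\epsilon)^{q}=a^{q}+qa^{q-1}\epsilon+\tfrac{q(q-1)}{2}a^{q-2}\epsilon^{2}+O(\epsilon^{3}),
\end{equation}
valid uniformly in $y^{n}$ for $\rho$ near $0$, I will take expectations and compute:
\begin{equation}
\mathbb{E}[\epsilon]=\sum_{|\calS|\ge 1}\rho^{|\calS|}\hat{f}_{\calS}\,\mathbb{E}[\chi_{\calS}]=0,\qquad
\mathbb{E}[\epsilon^{2}]=\sum_{|\calS|\ge 1}\rho^{2|\calS|}\hat{f}_{\calS}^{2}=\rho^{2}\mathbf{W}_{1}[f]+O(\rho^{4}).
\end{equation}
Hence $\mathbf{S}^{(q)}_{\rho}[f]=a^{q}+\tfrac{q(q-1)}{2}a^{q-2}\mathbf{W}_{1}[f]\rho^{2}+O(\rho^{3})$, from which \eqref{eq:1nd_der}'s companion identity $\frac{\mathrm d}{\mathrm d\rho}\mathbf{S}^{(q)}_{\rho}[f]\big|_{\rho=0}=0$ and the formula \eqref{eq:2nd_der} can both be read off immediately.

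\textbf{Main obstacle.} The only subtlety is the differentiability of $(T_{\rho}f)^{q}$ at $\rho=1$ at points $y^{n}$ where $f(y^{n})=0$ (so $T_{\rho}f(y^{n})\downarrow 0$ as $\rho\uparrow 1$): for non-integer $q\in(1,2)$ the derivative $qx^{q-1}$ is continuous but non-Lipschitz at $0$. I will address this by taking the one-sided limit as $\rho\uparrow 1$, noting that $q(T_{\rho}f)^{q-1}\frac{\mathrm d}{\mathrm d\rho}T_{\rho}f$ is bounded uniformly in $\rho\in[\tfrac12,1]$ by $q\cdot 1 \cdot n\max_{\calS}|\hat{f}_{\calS}|$, so dominated convergence lets us pass the limit inside the expectation; this legitimizes the identification $f^{q-1}=f$ in the limit. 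No other step requires delicate justification since everything is a finite sum of analytic functions of $\rho$.
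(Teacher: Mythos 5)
Your proof is correct and takes essentially the same Fourier-analytic route as the paper: expand $T_\rho f$ in the character basis, differentiate in $\rho$, and invoke orthonormality and \eqref{eq:InfluenceFourier}. The $\rho=1$ computation matches the paper's verbatim. For $\rho=0$, the paper simply states that the remaining identities "can be proved similarly," i.e., by direct first and second differentiation of $q\,\mathbb{E}[(T_\rho f)^{q-1}\sum_{|\calS|\ge1}|\calS|\rho^{|\calS|-1}\hat f_\calS\chi_\calS]$; your Taylor expansion of $(a+\epsilon)^q$ about the constant value $T_0 f=a$ is a tidier way to package that same calculation — it extracts both the first and second derivative at once from the $\rho^1$ and $\rho^2$ coefficients — but it is not a structurally different argument.

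One small remark on your "main obstacle." For $q>1$ the map $x\mapsto x^q$ is already $C^1$ on $[0,\infty)$ with derivative $qx^{q-1}$ continuous at $0$ (equal to $0$), so the chain rule applies pointwise at $\rho=1$ even where $f(y^n)=0$; the non-Lipschitzness of $qx^{q-1}$ would only matter if you needed a second derivative of $x^q$ at $0$, which you do not at $\rho=1$. Moreover, since $\{0,1\}^n$ is finite, $\mathbf{S}^{(q)}_\rho[f]$ is a finite sum of $C^1$ functions of $\rho$, so interchange of derivative and expectation is immediate and no dominated-convergence argument is required. Your resolution is correct, just unnecessary; the computation goes through without special handling.
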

\begin{proof}
By using the Fourier-analytic relations in  \eqref{eq:FourierTf},  we obtain
\begin{equation} 
T_{\rho}f(y^{n})=\sum_{\calS\subset[n]}\rho^{|\calS|}\,\hat{f}_{\calS}\,\chi_\calS(y^{n}).
\end{equation}
By the definition of the $q$-stability in \eqref{eq:q-stability} 
\begin{equation}
\mathbf{S}^{(q)}_{\rho}[f]=\mathbb{E}_{Y^{n}}\Bigg[\bigg(\sum_{\calS\subset[n]}\rho^{|\calS|}\hat{f}_{\calS}\chi_{\calS}(Y^{n})\bigg)^{q}\Bigg].
\end{equation}
Differentiating this with respect to $\rho$ yields 
\begin{align}
\frac{\rmd}{\rmd\rho}\mathbf{S}^{(q)}_{\rho}[f]  \! =\! q\ \mathbb{E}_{Y^{n}}\Bigg[\big(T_{\rho}f(Y^{n})\big)^{q-1}\! \sum_{\calS\subset[n]:|\calS|\ge 1}\! |\calS|\rho^{|\calS|-1}\hat{f}_{\calS}\chi_{\calS}(Y^{n})\Bigg].
\end{align}
Setting $\rho=1$, we obtain 
\begin{align}
\frac{\rmd}{\rmd\rho}\mathbf{S}^{(q)}_{\rho}[f]\Big|_{\rho=1} & =q\ \mathbb{E}_{Y^{n}}\Bigg[f(Y^{n})^{q-1}\sum_{\calS\subset[n]:|\calS|\ge 1}|\calS|\hat{f}_{\calS}\chi_{\calS}(Y^{n})\Bigg]\\
 & =q\ \mathbb{E}_{Y^{n}}\Bigg[f(Y^{n})\sum_{\calS\subset[n]:|\calS|\ge 1}|\calS|\hat{f}_{\calS}\chi_{\calS}(Y^{n})\Bigg]\label{eq:-8}\\
 & =q\ \sum_{\calS\subset[n]:|\calS|\ge 1}|\calS|\hat{f}_{\calS}^{2}\\
 & =\frac{q}{4}\ \mathbf{I}[f], \label{eqn:influ}
\end{align}
where \eqref{eq:-8} follows since $f$ only takes values in $\{0,1\}$,
and hence, $f^{q-1}=f$, and \eqref{eqn:influ} follows from \eqref{eq:InfluenceFourier}.  This proves~\eqref{eq:1nd_der}. The other equalities can be proved similarly. 
\end{proof}

\subsection{Max $q$-Stabilities in Extreme Cases of $\rho$}

Based on the concept of the total influence and the results stated in Sections~\ref{sec:influ} and~\ref{sec:edge}, we are now ready to analyze
the extreme cases of the max $q$-stability as $\rho\downarrow0$ and $\rho\uparrow1$. We first state a
lower bound on the derivative of the $q$-stability with respect to  $\rho$
evaluated at $\rho=1$. This result is due to \citet{mossel2005coin}
for integer $q$ and \citet{li2021boolean} for real
   $q>1$. 

\begin{theorem} \label{thm:StabDerivative} Let $q>1$. For a Boolean
function $f:\{0,1\}^{n}\to\{0,1\}$ with mean $a$,
\begin{equation}
 \frac{\partial}{\partial\rho}\mathbf{S}^{(q)}_{\rho}[f]\Big|_{\rho=1}\ge\frac{q}{2}\,a\,\log\Big(\frac{1}{a}\Big).
\end{equation}
This lower bound is attained if $a=2^{-k}$ for any $1\le k\le n$ and
$f$ is the indicator of an $(n-k)$-subcube. 
\end{theorem}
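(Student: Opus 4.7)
The plan is to reduce the claim to the edge-isoperimetric inequality (Theorem~\ref{thm:EdgeIsoper}) via the identity in~\eqref{eq:1nd_der}. Since~\eqref{eq:1nd_der} gives $\frac{\partial}{\partial\rho}\mathbf{S}^{(q)}_{\rho}[f]\big|_{\rho=1}=\frac{q}{4}\mathbf{I}[f]$, the inequality to be established is equivalent to $\mathbf{I}[f]\ge 2a\log(1/a)$, where $a=\mathbb{E}[f]$. This is essentially Harper's inequality, with the only nuisance being a convention mismatch: Theorem~\ref{thm:EdgeIsoper} is stated with the symbol $a$ denoting $\min\{\mathbb{E}[f],1-\mathbb{E}[f]\}$, whereas in the current theorem $a$ denotes $\mathbb{E}[f]$ itself.

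To handle this mismatch, I would split on whether $a\le 1/2$. If $a\le 1/2$, Theorem~\ref{thm:EdgeIsoper} directly gives $\mathbf{I}[f]\ge 2a\log(1/a)$ and we are done. If $a>1/2$, apply Theorem~\ref{thm:EdgeIsoper} to $1-f$ (which has mean $1-a<1/2$) and note that $\mathbf{I}[1-f]=\mathbf{I}[f]$, since flipping the output preserves which pairs $(x^n,(x^n)^{\oplus i})$ disagree. This yields $\mathbf{I}[f]\ge 2(1-a)\log(1/(1-a))$, and it remains to verify the elementary comparison $(1-a)\log(1/(1-a))\ge a\log(1/a)$ for $a\in(1/2,1)$. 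This follows from a one-line calculus check: the difference $\phi(a):=(1-a)\log(1/(1-a))-a\log(1/a)$ vanishes at $a=1/2$ and at $a=1$, and its derivative (in the natural logarithm) is $\log(a(1-a))+2\log e$, which is positive on the relevant interval whenever $a(1-a)>e^{-2}$, guaranteeing that $\phi\ge 0$ on $[1/2,1]$ by a boundary-value argument. Combining $\mathbf{I}[f]\ge 2a\log(1/a)$ with~\eqref{eq:1nd_der} yields the lower bound.

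For the tightness claim, the plan is a direct computation with $f=\bone_{\bbC_{n-k}}$, the indicator of the $(n-k)$-subcube with $k$ coordinates fixed to $1$. For each of the $k$ fixed coordinates $i\in[k]$, flipping $x_i$ changes $f$ exactly when all remaining fixed coordinates are $1$, giving $\mathbf{I}_i[f]=2^{-(k-1)}$; for $i>k$, $\mathbf{I}_i[f]=0$. Summing, $\mathbf{I}[f]=k\cdot 2^{-(k-1)}=2ak$, while $2a\log(1/a)=2\cdot 2^{-k}\cdot k$, so the edge-isoperimetric inequality (and hence our derivative inequality) is attained with equality.

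Altogether the argument is short and no real obstacle arises: the entire content is essentially already in~\eqref{eq:1nd_der} and Theorem~\ref{thm:EdgeIsoper}. The only delicate point is the convention mismatch for $a>1/2$, which is resolved by the symmetry $\mathbf{I}[f]=\mathbf{I}[1-f]$ together with the elementary monotonicity check on $x\mapsto x\log(1/x)$ described above.
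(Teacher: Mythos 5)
Your proof is correct and takes the same route as the paper, whose own proof is a one-liner invoking~\eqref{eq:1nd_der} together with the edge-isoperimetric inequality~\eqref{eq:isoperi}. The detail you add beyond the paper's terse statement — resolving the convention mismatch for $a>1/2$ via $\mathbf{I}[f]=\mathbf{I}[1-f]$ together with the comparison $(1-a)\log(1/(1-a))\ge a\log(1/a)$ on $[1/2,1]$, and the explicit subcube computation for the tightness claim — is correct and genuinely closes a small gap the paper leaves silent.
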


This theorem follows by the edge-isoperimetric inequality in~\eqref{eq:isoperi} and~\eqref{eq:1nd_der}. 

Note that if $\rho=1$, then for any $f:\{0,1\}^{n}\to\{0,1\}$ with
mean $a$, it holds that $\mathbf{S}^{(q)}_{\rho}[f]=a$ (cf.\ \eqref{eqn:simpleS}). Hence, from Theorem
\ref{thm:StabDerivative}, it is plausible, via ``continuity arguments'', that   if $a=2^{-k}$ for integer $k$ and $\rho$
is sufficiently close to $1$, then $\mathbf{S}^{(q)}_{\rho}[f]$ is maximized
by the indicator of an $(n-k)$-subcube. This  can be proven
rigorously using the fact that the number of Boolean functions for a given $n$ is   finite, some approximation arguments involving Taylor's theorem, and bounds on the derivative of $\bS_\rho^{(q)}$ evaluated at $\rho=1$ (Theorem \ref{thm:StabDerivative}).
This is stated formally in the following theorem which is  due
to  \citet{mossel2005coin} for integer $q$ and
 \citet{li2021boolean} for real   $q$. 

\begin{theorem} \label{thm:StabDerivative-1} Fix $n\ge1$, $q>1$,
and $a=2^{-k}$ with $1\le k\le n$. There exists an $\epsilon \in (0,1)$
such that for all $\rho\in[1-\epsilon,1]$, ${\Gamma}_{\rho}^{(q)}(a)$
is attained by the indicator of an $(n-k)$-subcube. \end{theorem}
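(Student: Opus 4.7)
The plan is to perform a Taylor expansion of the $q$-stability around $\rho = 1$, and argue that the first-order term strictly favors subcubes by invoking the edge-isoperimetric inequality (Theorem~\ref{thm:EdgeIsoper}). The crucial structural observation is that the space
\begin{equation}
\calF_{n,a} \,:=\, \{f : \{0,1\}^n \to \{0,1\} \,:\, \bbE[f(X^n)] = a\}
\end{equation}
is finite for fixed $n$, so any strictness at first order can be upgraded to a uniform bound and then dominate a higher-order remainder.

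First, I would note that by the symmetry of $\pi_{XY}^n$ under coordinate permutations and simultaneous bit-flips, every indicator $\bone_C$ of an $(n-k)$-subcube produces the same value of $\bS_\rho^{(q)}$; call it $s^*(\rho)$, and write $\calG_{n,k} \subset \calF_{n,a}$ for the set of such indicators. At $\rho = 1$ we have $\bS_1^{(q)}[f] = \bbE[f(X^n)^q] = \bbE[f(X^n)] = a$ for every $f \in \calF_{n,a}$ (since $f\in\{0,1\}$), and Theorem~\ref{thm:StabDerivative} gives $(\rmd/\rmd\rho)\bS_\rho^{(q)}[f]\big|_{\rho = 1^-} = (q/4)\bI[f]$. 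A direct expansion using
\begin{equation}
\pi_{X|Y}^n(x^n\mid y^n) = \Big(\tfrac{1+\rho}{2}\Big)^{n-d_{\mathrm{H}}(x^n,y^n)}\Big(\tfrac{1-\rho}{2}\Big)^{d_{\mathrm{H}}(x^n,y^n)}
\end{equation}
then yields, for $\rho = 1-\delta$ with $\delta \in [0, 1/2]$,
\begin{equation}
\bS_{1-\delta}^{(q)}[f] = a - \frac{q\delta}{4}\bI[f] + R(f,\delta), \qquad R(f,\delta) = O\big(\delta^{\min(q,2)}\big),
\end{equation}
with the implicit constant uniform in $f \in \calF_{n,a}$ by finiteness.

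Second, by the edge-isoperimetric inequality (Theorem~\ref{thm:EdgeIsoper}), every $f \in \calF_{n,a}$ satisfies $\bI[f] \ge 2a\log(1/a) = 2ak$, with equality achieved on $\calG_{n,k}$ (Theorem~\ref{thm:StabDerivative}). The hard part of the argument is upgrading this to the \emph{strict} statement that every $f \in \calF_{n,a}\setminus \calG_{n,k}$ satisfies $\bI[f] > 2ak$ when $a = 2^{-k}$. This characterization of the extremal sets --- namely, that the only subsets of $\{0,1\}^n$ of size $2^{n-k}$ achieving the minimum edge-boundary are subcubes --- is a classical consequence of Harper's edge-isoperimetric theorem on the hypercube, but it is not proved in the excerpt; this is the main obstacle. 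Granting it, the finiteness of $\calF_{n,a}$ yields a positive gap
\begin{equation}
\Delta \,:=\, \min_{f \in \calF_{n,a}\setminus \calG_{n,k}} \bI[f] \,-\, 2ak \,>\, 0.
\end{equation}

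Finally, combining the expansion with this gap, for any $C \in \calG_{n,k}$ and any $f \in \calF_{n,a}\setminus \calG_{n,k}$,
\begin{equation}
s^*(1-\delta) - \bS_{1-\delta}^{(q)}[f] \,\ge\, \frac{q\Delta}{4}\,\delta \,-\, K\,\delta^{\min(q,2)}
\end{equation}
for a finite constant $K$ depending only on $n,q,k$. Since $\min(q,2) > 1$, the right-hand side is strictly positive for all sufficiently small $\delta > 0$. Choosing $\epsilon \in (0,1)$ below that threshold ensures that for every $\rho \in [1-\epsilon,1]$ and every $f \in \calF_{n,a}\setminus \calG_{n,k}$, every subcube indicator strictly outperforms $f$, so the supremum defining $\Gamma_\rho^{(q)}(a)$ in~\eqref{eqn:asymp_max_q_0} is attained by any $\bone_C \in \calG_{n,k}$.
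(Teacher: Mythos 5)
Your proposal is correct and takes essentially the same route as the paper's proof sketch: Taylor-expand $\bS_\rho^{(q)}[f]$ at $\rho=1$, combine the edge-isoperimetric inequality (Theorem~\ref{thm:EdgeIsoper}) with Theorem~\ref{thm:StabDerivative} to show subcubes strictly minimize the first-order coefficient over $\calF_{n,a}\setminus\calG_{n,k}$, and use the finiteness of the set of Boolean functions on $\{0,1\}^n$ to upgrade strictness to a uniform gap that dominates the higher-order remainder for $\rho$ near $1$. In fact your remainder bound $R(f,\delta)=O(\delta^{\min(q,2)})$ is slightly \emph{more} careful than the paper's Lagrange form $\phi_f(\tilde\rho)(\rho-1)^2$, since for $1<q<2$ the second derivative of $(T_\rho f(y^n))^q$ can blow up as $\rho\uparrow 1$ at points with $f(y^n)=0$; fortunately both arguments survive because $\min(q,2)>1$. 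The ``main obstacle'' you honestly flag---that at $a=2^{-k}$ only subcubes achieve equality in Harper's edge-isoperimetric theorem---is also invoked without proof in the paper's sketch (which simply cites Theorem~\ref{thm:StabDerivative}, where only the non-strict bound is stated), so this is a shared reliance on the classical uniqueness result rather than a gap peculiar to your proposal.
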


\begin{proof}[Proof Sketch of Theorem~\ref{thm:StabDerivative-1}] Fix a  Boolean function $f$ and $\rho \in (0,1)$.   Using Taylor's theorem,
we can write 
\begin{equation}
\mathbf{S}_{\rho}^{(q)}[f]=\mathbf{S}_{1}^{(q)}[f]+(\rho-1) \frac{\partial}{\partial\rho}\mathbf{S}^{(q)}_{\rho}[f]\Big|_{\rho=1}+\phi_{f}(\tilde{\rho})(\rho-1)^{2},
\end{equation}
where $\phi_{f}:[0,1]\to\mathbb{R}$ is a bounded function induced
by $f$ and $\tilde{\rho} \in (\rho,1)$.  Since $n$ is fixed, the   number of Boolean functions
$f:\{0,1\}^{n}\to\{0,1\}$ is finite. 
From this fact, we deduce that $\phi(\rho):=\max_{f:\{0,1\}^{n}\to\{0,1\}}\phi_{f}(\rho)$,
then $\phi$ is bounded, i.e., there is some constant $c_2$
such that $|\phi(\rho)|\le c_2$ for all $\rho \in [0,1]$. Moreover, if $f$
is not the indicator of an $(n-k)$-subcube, it holds that  (cf.\ Theorem~\ref{thm:StabDerivative})
\begin{equation}
\left.\frac{\partial}{\partial\rho}\mathbf{S}^{(q)}_{\rho}[f]\right|_{\rho=1}>\frac{q}{2}\,a\,\log\Big(\frac{1}{a}\Big).
\end{equation}
By again exploiting that fact that the number of Boolean functions is finite,
\begin{equation}
c_1:=\min_{f\in\mathcal{F}}\left.\frac{\partial}{\partial\rho}\mathbf{S}^{(q)}_{\rho}[f]\right|_{\rho=1}>\frac{q}{2}\,a\,\log\Big(\frac{1}{a}\Big),
\end{equation}
where $\mathcal{F}$ denotes the set of Boolean functions $f:\{0,1\}^{n}\to\{0,1\}$
that {\em cannot} be written as the indicator of an $(n-k)$-subcube. Therefore,
for any $f\in\mathcal{F}$,
\begin{equation}
\mathbf{S}_{\rho}^{(q)}[f]\le a+c_{1}(\rho-1)+ c_2(\rho-1)^{2}.\label{eq:-9}
\end{equation} 

By Taylor's theorem, one can lower bound the $q$-stability for the indicator
of an $(n-k)$-subcube $\bbC_{n-k}$ as  
\begin{equation}
\mathbf{S}_{\rho}^{(q)}[\bone_{\mathbb{C}_{n-k}}]\ge a+(\rho-1)\frac{q}{2}\,a\,\log\Big(\frac{1}{a}\Big)+c_3(\rho-1)^{2},\label{eq:-10}
\end{equation}
where $c_3$ is an absolute constant independent of $\rho$. Comparing~\eqref{eq:-9}
and \eqref{eq:-10}, we observe that  there exists a constant $\epsilon>0$
such that the right-hand side of \eqref{eq:-10} is larger than \eqref{eq:-9}
for all $\rho\in[1-\epsilon,1]$, concluding the proof sketch of Theorem~\ref{thm:StabDerivative-1}. \end{proof}

Concerning the other extreme case, i.e., the limiting case as $\rho\downarrow0$,
following the proof ideas used in Theorems \ref{thm:StabDerivative}
and \ref{thm:StabDerivative-1}, one can also show the following result, which is
due to \citet{mossel2005coin} and  
\citet{li2021boolean}. 

\begin{theorem} \label{thm:StabDerivative-1-1} Fix $n\ge1$, $q>1$,
and a dyadic rational $a\in(0,1)$. There exists an $\epsilon \in (0,1)$
such that for all $\rho\in[0,\epsilon]$, ${\Gamma}_{\rho}^{(q)}(a)$
is attained by  some Boolean function that maximizes the degree-$1$
Fourier weight~$\mathbf{W}_{1}$. In particular, for $a=1/2$, there
exists an $\epsilon>0$ such that for all $\rho\in[0,\epsilon]$,
${\Gamma}_{\rho}^{(q)}(1/2)$ is attained by  dictator functions.
\end{theorem}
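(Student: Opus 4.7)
The plan is to mirror the proof strategy of Theorem~\ref{thm:StabDerivative-1}, but to perform a Taylor expansion of $\mathbf{S}_{\rho}^{(q)}[f]$ around $\rho=0$ rather than $\rho=1$. The decisive input is the second-derivative formula \eqref{eq:2nd_der}, which shows that the quadratic coefficient of $\mathbf{S}_\rho^{(q)}[f]$ at $\rho=0$ is proportional to the degree-$1$ Fourier weight $\mathbf{W}_1[f]$. Since the constant term and the linear term are the same for all Boolean functions with the same mean $a$, this second-order term is what distinguishes the competing candidates for small $\rho$.

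More concretely, for any Boolean function $f:\{0,1\}^n\to\{0,1\}$ with mean $a$, we have $\mathbf{S}_0^{(q)}[f]=a^q$ (since $T_0 f\equiv a$), and by \eqref{eq:2nd_der} and the vanishing of the first derivative at $\rho=0$, Taylor's theorem applied up to third order gives
\begin{equation}
\mathbf{S}_{\rho}^{(q)}[f]=a^{q}+\tfrac{1}{2}\,q(q-1)\,a^{q-2}\,\mathbf{W}_{1}[f]\,\rho^{2}+\psi_{f}(\tilde{\rho})\,\rho^{3},
\end{equation}
for some $\tilde{\rho}\in(0,\rho)$ and some bounded remainder function $\psi_f$ on $[0,1]$. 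Because the blocklength $n$ is fixed, the set $\mathcal{F}_{a}:=\{f:\{0,1\}^n\to\{0,1\}:\mathbb{E}[f]=a\}$ is finite; hence there is an absolute constant $C$ with $|\psi_{f}(\tilde{\rho})|\le C$ uniformly over $f\in\mathcal{F}_{a}$ and $\tilde{\rho}\in[0,1]$.

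Next, set $W^{*}(a):=\max_{f\in\mathcal{F}_{a}}\mathbf{W}_{1}[f]$ and let $\mathcal{F}_{a}^{*}$ be the (nonempty, finite) set of maximizers. By finiteness of $\mathcal{F}_{a}\setminus\mathcal{F}_{a}^{*}$, there is a gap $\delta>0$ with $\mathbf{W}_{1}[f]\le W^{*}(a)-\delta$ for every $f\in\mathcal{F}_{a}\setminus\mathcal{F}_{a}^{*}$. Then for any $f^{*}\in\mathcal{F}_{a}^{*}$ and any suboptimal $f\in\mathcal{F}_{a}\setminus\mathcal{F}_{a}^{*}$,
\begin{equation}
\mathbf{S}_{\rho}^{(q)}[f^{*}]-\mathbf{S}_{\rho}^{(q)}[f]\ge\tfrac{1}{2}\,q(q-1)\,a^{q-2}\,\delta\,\rho^{2}-2C\rho^{3},
\end{equation}
which is strictly positive for all $\rho\in(0,\epsilon]$ once we pick $\epsilon<q(q-1)a^{q-2}\delta/(4C)$. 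This proves that for all $\rho\in[0,\epsilon]$, the maximum $\Gamma_\rho^{(q)}(a)$ is attained on $\mathcal{F}_a^{*}$, yielding the first assertion.

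For the specific case $a=1/2$, we invoke the LP bound \eqref{eqn:lp_bounds}, which gives $\mathbf{W}_1[f]\le a/2=1/4$ for every balanced Boolean $f$. A dictator function $f(x^n)=x_i$ (or $1-x_i$) has Fourier expansion $f=\tfrac{1}{2}\mp\tfrac{1}{2}\chi_{\{i\}}$, so $\mathbf{W}_1[f]=1/4$, matching the LP bound. Hence dictators lie in $\mathcal{F}_{1/2}^{*}$, and the first part of the theorem immediately specializes to the claim that dictators attain $\Gamma_\rho^{(q)}(1/2)$ for all $\rho\in[0,\epsilon]$. The only real ``obstacle'' is bookkeeping the Taylor remainder uniformly over $\mathcal{F}_a$ and invoking \eqref{eq:2nd_der}; this is entirely routine once the finiteness of $\mathcal{F}_a$ is exploited as in Theorem~\ref{thm:StabDerivative-1}.
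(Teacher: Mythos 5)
Your argument is correct and follows exactly the approach the paper indicates (``following the proof ideas used in Theorems~\ref{thm:StabDerivative} and~\ref{thm:StabDerivative-1}''): Taylor expand at $\rho=0$, exploit that the first derivative vanishes so $\mathbf{W}_1[f]$ controls the leading correction via~\eqref{eq:2nd_der}, and use finiteness of $\mathcal{F}_a$ to get a uniform gap $\delta$ and a uniform remainder bound. The identification of dictators as $\mathbf{W}_1$-maximizers for $a=1/2$ via the LP bound~\eqref{eqn:lp_bounds} and the explicit Fourier expansion is also the intended specialization.

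One small technical point worth tightening: you assert $|\psi_f(\tilde\rho)|\le C$ uniformly for $\tilde\rho\in[0,1]$, but for non-integer $q\in(1,3)$ the third derivative of $\rho\mapsto\mathbf{S}_\rho^{(q)}[f]$ can blow up as $\rho\uparrow 1$, because $T_1 f(y^n)\in\{0,1\}$ and the inner function $x\mapsto x^q$ has unbounded higher derivatives at $x=0$. (The paper's own statement of Theorem~\ref{thm:StabDerivative-1}, which claims $\phi_f$ bounded on $[0,1]$, is subject to the same caveat.) The fix is immediate: since $a\in(0,1)$, $T_\rho f>0$ strictly for all $\rho\in[0,1)$, so the third derivative is continuous on, say, $[0,1/2]$; take $C$ as its maximum over the finite set $\mathcal{F}_a$ and over $\tilde\rho\in[0,1/2]$, and set $\epsilon:=\min\{1/2,\,q(q-1)a^{q-2}\delta/(4C)\}$. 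With that adjustment the proof is complete.
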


Theorems \ref{thm:StabDerivative}, \ref{thm:StabDerivative-1},
and \ref{thm:StabDerivative-1-1} can be   extended to their symmetric
counterparts of the max $q$-stability. For $q=1$, they can also be extended to 
 the $\Phi$-versions of the max $q$-stabilities (cf.\ Definition~\ref{def:Phi_versions});  
see \citet{courtade2014boolean}, \citet{ordentlich2016improved}, and \citet{yang2019most}. 

\section{The Balanced Case }
\label{sec:balanced}

In this section, we consider   the balanced case, i.e., $a=1/2$,
and  discuss  recent progress on Conjectures~\ref{conj:AsymmetricStability}
and~\ref{conj:SymmetricStability}. We first focus on the case $q=1$,
i.e., the balanced version of the Courtade--Kumar conjecture, which can be 
 stated as follows.

\begin{conjecture} \label{conj:ck} For any $n\in\bbN$ and $\rho\in(0,1)$,
\begin{equation}
\max_{\textrm{Boolean }f:\mathbb{E}[f(X^{n})]=1/2}I(f(X^{n});Y^{n})\stackrel{?}{=}1-h\Big(\frac{1-\rho}{2}\Big).\label{eq:NICDCKconjecture}
\end{equation}
\end{conjecture}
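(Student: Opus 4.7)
The plan is to reduce the conjecture to a statement about the $\Phi$-symmetric max $1$-stability via the identity \eqref{eq:NICD-8}, namely $\breve{\Pi}_\rho^{(1)}(1/2)+h(1/2)=\max_f I(f(X^n);Y^n)$, so that proving \eqref{eq:NICDCKconjecture} is equivalent to showing that dictator functions attain $\breve{\Pi}_\rho^{(1)}(1/2)$. Since dictators are anti-symmetric, \eqref{eq:stability-asym-func} and the reduction argument of Propositions~\ref{prop:identity2-1} and \ref{prop:identity2} let me restrict to a single balanced $f$ and work with $\bS_\rho^{(\Phi_1)}[f]$. A natural first step is to invoke the Barnes--Ozgur dichotomy (Lemma~\ref{lem:BarnesOzgur}, and its $q=1$ variant proved by an analogous Laguerre-type argument applied to the function $q\mapsto \bbE[\Phi_q(T_\rho f)]-\bbE[\Phi_q(T_\rho \mathrm{Maj}_1)]$): it suffices to identify a single value $q^*\in[1,2]$ at which dictators beat every balanced $f$, for then monotonicity of the sign of the difference yields the result at $q=1$. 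Since the $q=2$ case already follows from Witsenhausen's maximal correlation bound (Corollary~\ref{thm:wit-stability}), the task reduces to propagating that optimality down to $q=1$.

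To propagate the inequality, the plan is to differentiate $q\mapsto \bS_\rho^{(q)}[f]-\bS_\rho^{(q)}[\mathrm{Maj}_1]$ at $q=2$ and show the derivative is $\le 0$ for every balanced $f\ne \mathrm{Maj}_1$, which together with the Barnes--Ozgur sign structure would force $q_{\min}=1$. The derivative involves $\bbE[(T_\rho f)^2\log T_\rho f]$; I would expand $T_\rho f$ in the Fourier basis as in \eqref{eq:FourierTf} and isolate the degree-$1$ contribution, controlling higher degrees via the strong small-set expansion bound (Theorem~\ref{thm:strongsse}, or rather its balanced analogue derived from hypercontractivity in Theorem~\ref{thm:hyper_single}). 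The level-$1$ inequality $\bW_1[f]\le 1/2=\bW_1[\mathrm{Maj}_1]$ for balanced $f$ (Lemma~\ref{lem:W1} combined with~\eqref{eqn:lp_bounds}) then supplies the sign at the extreme $\rho\downarrow 0$ (consistent with Theorem~\ref{thm:StabDerivative-1-1}), while the edge-isoperimetric inequality (Theorem~\ref{thm:EdgeIsoper} with $a=1/2$ giving $\bI[f]\ge 1$) supplies the sign at $\rho\uparrow 1$ (consistent with Theorems~\ref{thm:StabDerivative} and~\ref{thm:StabDerivative-1}). The hope is to interpolate these two endpoint inequalities by writing
\begin{equation}
I(f(X^n);Y^n)=\int_0^1 \frac{\rmd}{\rmd \rho}\, H(T_\rho f(Y^n))\, \rmd\rho
\end{equation}
and bounding the integrand pointwise by its value for $\mathrm{Maj}_1$ using Mrs.\ Gerber's lemma together with a degree-$1$ Fourier weight bound.

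The main obstacle, and the reason this conjecture has remained open since 2013, is that $\Phi_1(t)=t\log t$ is neither operator convex nor compatible with the clean two-function hypercontractivity (Theorem~\ref{thm:hyper2}) that drives the $q=2$ case: passing from the $L^p$ geometry underlying small-set expansion to the entropic $\Phi_1$ stability introduces an essentially logarithmic correction that is invisible to the level-$1$ bound alone. In particular, any approach relying solely on $\bW_1[f]\le 1/2$ cannot suffice, since $\bW_1[f]=1/2$ is attained not only by dictators but also by certain non-anti-symmetric functions whose higher Fourier weights differ, so the argument must crucially use a \emph{structural} consequence of $\bW_1[f]=1/2$ (in the spirit of an FKN-type rigidity theorem) to force $f$ to be close to a dictator. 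A promising concrete subgoal would therefore be to establish a stability version of Theorem~\ref{thm:StabDerivative-1-1} that is uniform in $\rho\in(0,1)$: show that if $I(f(X^n);Y^n)$ is within $\eps$ of the dictator value for some $\rho$, then $f$ is $O(\eps)$-close in Hamming distance to a dictator, at which point a perturbation argument around dictators, controlled by the hypercontractivity-based second-order bound, would close the gap.
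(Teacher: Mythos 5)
This statement is Conjecture~\ref{conj:ck}, the balanced Courtade--Kumar conjecture, which the paper explicitly presents as \emph{open}: the text that follows records only partial progress (the Witsenhausen--Wyner bound $\le\rho^2$, the Ordentlich--Shayevitz--Weinstein improvement~\eqref{eqn:ord}, Samorodnitsky's dimension-independent threshold $\rho_0$ in Theorem~\ref{thm:sam}, the explicit threshold $\rho_1\approx 0.461$, and Pichler et al.'s resolution of the weaker two-function variant~\eqref{eq:NICDCKconjecture-1}). So there is no proof in the paper to compare against. Your proposal is, appropriately, a sketch of an attack rather than a claimed proof, and you correctly identify the central obstruction (level-$1$ information alone is insufficient; an FKN-type rigidity statement is needed), which is consistent with the strategy underlying Samorodnitsky's partial result.

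That said, the key reduction step in your plan has a concrete gap. You propose to show that the derivative of $q\mapsto \bS_\rho^{(q)}[f]-\bS_\rho^{(q)}[\mathrm{Maj}_1]$ at $q=2$ is $\le 0$ for every balanced non-dictator $f$, and claim this ``together with the Barnes--Ozgur sign structure would force $q_{\min}=1$.'' It would not. By Lemma~\ref{lem:BarnesOzgur}, the sum-of-exponentials $g_f(q)=N_q(f)-N_q(f_0)$ has at most four real roots, three of which are $0$, $1$, and some $q_1\ge 2$, and the remaining root satisfies $q_4\le 2$. Knowing $g_f(2)\le 0$ and $g_f'(2)\le 0$ is perfectly compatible with $q_4$ lying strictly in $(1,2)$: for instance with sign pattern $+,-,+,-,+$ on the five intervals cut out by the roots, $g_f$ is negative throughout $(q_4,q_1)$, so both $g_f(2)\le 0$ and a negative derivative at the interior point $q=2$ hold while $g_f>0$ on $(1,q_4)$, i.e.\ dictators do \emph{not} beat $f$ at $q$ near $1$. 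To pin down $q_{\min}=1$ you would have to control $g_f$ (or its derivative) at or near $q=1$ itself, not at $q=2$. This is exactly where the difficulty concentrates and where the logarithmic nature of $\Phi_1$ bites; your ``propagate from $q=2$'' mechanism does not transfer optimality across that interval. A minor arithmetic slip: in the paper's $\{0,1\}$-valued Fourier convention a dictator has $\bW_1[\mathrm{Maj}_1]=1/4$ (matching the LP bound $\varphi(1/2)=1/4$ in~\eqref{eqn:lp_bounds}), not $1/2$; the value $1$ or $1/2$ comes from other normalizations and should not be mixed in here.
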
 

In the original version of Courtade--Kumar conjecture,
the Boolean function $f$ is not required to satisfy $\mathbb{E}[f(X^{n})]=1/2$. It 
 has been numerically verified to be true for all $n\le7$~\cite{courtade2014boolean}.
An old result by Witsenhausen and Wyner \cite{witsenhausen1975conditional}
(also see \citet{erkip1996efficiency}) yields the following bound.

\begin{proposition} It holds that 
\begin{equation}
\max_{\textrm{Boolean }f:\mathbb{E}[f(X^{n})]=1/2}I(f(X^{n});Y^{n})\le\rho^{2}.\label{eqn:ww}
\end{equation}
\end{proposition}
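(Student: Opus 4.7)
The plan is to derive \eqref{eqn:ww} as a consequence of a strong data processing inequality (SDPI) for the DSBS, combined with the tensorization of maximal correlation already developed in Section~\ref{sec:clt_half}.

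Setting $U := f(X^n)$, the Markov chain $U - X^n - Y^n$ holds because $f$ is deterministic, and the balanced hypothesis $\bbE[f(X^n)] = 1/2$ forces $H(U)=1$, so that $I(U;X^n) = H(U) - H(U\mid X^n) = 1$. The next step is to invoke the Ahlswede--G\'acs bound: for any joint distribution $\pi_{AB}$ and any Markov chain $U - A - B$,
\begin{equation}
I(U;B) \le \rho_{\mathrm{m}}(A;B)^2 \cdot I(U;A). \label{eqn:plan_sdpi}
\end{equation}
Applied with $A = X^n$ and $B = Y^n$, this becomes $I(U;Y^n) \le \rho_{\mathrm{m}}(X^n; Y^n)^2$. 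By the tensorization property \eqref{eqn:tensor_prop} and the identity $\rho_{\mathrm{m}} = |\rho|$ for binary marginals recalled in Section~\ref{sec:clt_half}, one has $\rho_{\mathrm{m}}(X^n;Y^n) = \rho_{\mathrm{m}}(X_1;Y_1) = |\rho|$, so the desired bound $I(U;Y^n) \le \rho^2$ drops out.

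The main obstacle is justifying \eqref{eqn:plan_sdpi}, since it is a quantitative refinement of the data processing inequality. I would prove it by the standard infinitesimal argument: perturbing the channel $P_{U|A}$ towards the product structure $P_U \otimes P_A$ by a parameter $\epsilon \downarrow 0$, both mutual informations reduce at second order to weighted $\chi^2$-divergences, and the resulting ratio is exactly the squared second singular value of the conditional expectation operator associated to $\pi_{AB}$, which coincides with $\rho_{\mathrm{m}}(A;B)^2$. Extending from the infinitesimal regime to general $U$ would then be achieved via the chain rule $I(U;Y^n) = \sum_{i=1}^n I(U; Y_i \mid Y^{i-1})$ together with a convexity argument on the SDPI ratio.

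An alternative, more self-contained path---closer in spirit to the original Witsenhausen--Wyner approach in \cite{witsenhausen1975conditional}---would instead invoke Mrs.\ Gerber's Lemma applied to the DSBS channel. That lemma provides the concave lower bound $H(Y^n\mid U) \ge n\, h\bigl(p * h^{-1}(H(X^n\mid U)/n)\bigr)$, where $p=(1-\rho)/2$. Specializing to the balanced case $I(U;X^n)=1$ (so that $H(X^n \mid U) = n - 1$) and differentiating the right-hand side at this operating point yields the coefficient $\rho^2$ as the slope of the entropy-conversion curve, which is exactly the bound we seek. I expect the calculus verification that this slope equals $\rho^2$ to be the only technically delicate step of this alternative route, but it is an unconditional computation.
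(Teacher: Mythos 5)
Your first route hinges on the general claim $I(U;B)\le\rho_{\mathrm m}(A;B)^{2}\,I(U;A)$ for an arbitrary Markov chain $U-A-B$, and that claim is false. It would assert that the Ahlswede--G\'acs strong-data-processing constant $\eta_{\mathrm{KL}}$ always equals the squared maximal correlation, whereas the universal relation is only $\eta_{\mathrm{KL}}\ge\rho_{\mathrm m}^{2}$, with strict inequality for many channels. Equality \emph{does} hold for the binary symmetric channel, hence for the DSBS, so the conclusion you want is available as a black box; but the proof sketch you give does not supply it. The infinitesimal perturbation argument only computes the limiting ratio $I(U;Y^{n})/I(U;X^{n})$ as $I(U;X^{n})\downarrow 0$, which is precisely the mechanism behind the \emph{lower} bound $\eta_{\mathrm{KL}}\ge\rho_{\mathrm m}^{2}$, not an upper bound valid for all $U$. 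The promised ``extension via the chain rule over $Y^{n}$ plus a convexity argument on the SDPI ratio'' has no content that closes this gap --- if such an argument worked it would prove $\eta_{\mathrm{KL}}=\rho_{\mathrm m}^{2}$ universally, which is wrong. In fact the known proofs that $\eta_{\mathrm{KL}}(\mathrm{BSC})=\rho^{2}$ go through Mrs.\ Gerber's lemma or a hypercontractive estimate, which is exactly why the paper points to those two tools.

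Your second route, via Mrs.\ Gerber's lemma, is sound and is one of the two methods the paper explicitly cites, though the paper's written-out proof is the other one: it bounds the $q$-stability $\mathbf{S}^{(q)}_{\rho}[f]$ by hypercontractivity and lets $q\downarrow1$. To make the MGL argument airtight you need to be more careful than ``differentiating at the operating point.'' Set $g(x):=h\bigl(p\ast h^{-1}(x)\bigr)$. First, show $g(1)=1$ and that the one-sided derivative satisfies $g'(1^{-})=\rho^{2}$: this requires a second-order Taylor expansion around $x=1$, because $h'(1/2)=0$ makes the naive chain-rule expression for $g'(1)$ an indeterminate $0/0$. Second --- and this is the step you omit --- invoke the \emph{convexity} of $g$, which is the content of Mrs.\ Gerber's lemma itself, so that the graph lies above its tangent line at $x=1$. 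That gives $g(1-1/n)\ge 1-\rho^{2}/n$, hence $I(f(X^{n});Y^{n})=n-H(Y^{n}\mid U)\le n\bigl(1-g(1-1/n)\bigr)\le\rho^{2}$ for every finite $n$. Without the convexity step the derivative at $x=1$ alone does not control the finite-$n$ operating point $x=1-1/n$.
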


This proposition can be proved via the so-called Mrs.\ Gerber's lemma~\cite{wyner1973theorem}
or   the hypercontractivity inequality in \eqref{eq:NICDFHC-1}.
Here, we provide a short justification based on  the latter. By~\eqref{eq:NICDFHC-1}, we
obtain that for $q>1$ and any Boolean function $f$ with mean $a$,
\begin{align}
\mathbf{S}^{(q)}_{\rho}[f] & \le a^{\frac{q}{1+(q-1)\rho^{2}}}.\label{eq:NICDMDstability-2}
\end{align}
In other words,
\begin{equation}
{\Gamma}_{\rho}^{(q)}(a)\le a^{\frac{q}{1+(q-1)\rho^{2}}}\quad\mbox{and}\quad  
 \breve{\Gamma}_{\rho}^{(q)}(a)\le a^{\frac{q}{1+(q-1)\rho^{2}}}+\bara^{\frac{q}{1+(q-1)\rho^{2}}}.
\end{equation}
Substituting the latter into \eqref{eqn:qge1} and setting $a=1/2$ yields 
\begin{align}
\breve{\Pi}_{\rho}^{(q)}(1/2)\le\frac{2^{\frac{(1-q)(1-\rho^{2})}{1+(q-1)\rho^{2}}}-1}{(q-1)\ln2}.\label{eqn:qge1-1}
\end{align}
Letting $q\downarrow1$, we obtain $\breve{\Pi}_{\rho}^{(1)}(1/2)\le\rho^{2}-1$.
Substituting this into \eqref{eq:NICD-8} and noting that $h(1/2)=1$ yields \eqref{eqn:ww} as desired.

Considering small $\rho$, and using   Fourier
analysis and hypercontractivity, \citet{ordentlich2016improved} improved
the bound in \eqref{eqn:ww} to the following.

\begin{proposition} For $0\le\rho\le{1}/{\sqrt{3}}$, 
\begin{equation}
\max_{\textrm{Boolean }f:\mathbb{E}[f(X^{n})]=1/2}\!I(f(X^{n});Y^{n})\le\frac{\log e}{2}\rho^{2}+9\Big(1-\frac{\log e}{2}\Big)\rho^{4}.\label{eqn:ord}
\end{equation}
\end{proposition}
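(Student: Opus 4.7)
The plan is to write $I(f(X^n); Y^n) = 1 - \mathbb{E}[h(T_\rho f(Y^n))] = \mathbb{E}[\phi(W)]$, where $W := T_\rho f(Y^n) - 1/2$ and $\phi(t) := 1 - h(1/2 + t)$, and then combine a quartic polynomial upper bound on $\phi$ with an $L^4$-estimate on $W$ furnished by hypercontractivity. Since $f$ is balanced, $W$ has mean zero and takes values in $[-1/2, 1/2]$. The target coefficients $\tfrac{\log e}{2}\rho^2 + 9(1 - \tfrac{\log e}{2})\rho^4$ essentially dictate the strategy: I aim for $\phi(t) \le 2\log e \cdot t^2 + 16(1 - \log e/2)\, t^4$ on $[-1/2, 1/2]$, together with the bounds $\mathbb{E}[W^2] \le \rho^2/4$ and $\mathbb{E}[W^4] \le 9\rho^4/16$, the last of which is where the hypothesis $\rho \le 1/\sqrt 3$ is used.

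First I would derive the polynomial bound on $\phi$ by a direct power-series calculation. Rewriting $h(1/2+t) = 1 - \frac{1}{2\ln 2}[(1+2t)\ln(1+2t) + (1-2t)\ln(1-2t)]$ and expanding the logarithms yields the clean identity $\phi(t) = \log e \cdot \sum_{k \ge 1}\frac{(2t)^{2k}}{2k(2k-1)}$. For $|2t| \le 1$ we have $(2t)^{2k} \le (2t)^4$ whenever $k \ge 2$, and the telescoping identity $\sum_{k\ge 1}\bigl(\tfrac{1}{2k-1} - \tfrac{1}{2k}\bigr) = \ln 2$ gives $\sum_{k\ge 2}\tfrac{1}{2k(2k-1)} = \ln 2 - 1/2$; multiplying by $16\log e$ produces exactly $16(1 - \log e/2)$ as the quartic coefficient, while the $k=1$ term contributes $2\log e$ as the quadratic coefficient. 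Taking expectations yields $I(f(X^n); Y^n) \le 2\log e \cdot \mathbb{E}[W^2] + 16(1 - \log e/2)\mathbb{E}[W^4]$. The second-moment bound is immediate from Parseval: since $\hat f_\emptyset = 1/2$, we have $\mathbb{E}[W^2] = \sum_{k\ge 1}\rho^{2k}\mathbf{W}_k[f] \le \rho^2 \var(f) = \rho^2/4$. For the fourth moment I would factor $T_\rho = T_{1/\sqrt 3}\circ T_{\sqrt 3 \rho}$, legitimate because $\sqrt 3 \rho \le 1$, and apply the $(2,4)$-hypercontractivity $\|T_{1/\sqrt 3}g\|_4 \le \|g\|_2$ (a special case of Theorem \ref{thm:hyper_single} with $p = 2$ and $q = 4$; the extension from nonnegative to signed $g$ is standard via the pointwise inequality $|T_{1/\sqrt 3}g| \le T_{1/\sqrt 3}|g|$). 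This gives $\|W\|_4^2 \le \|T_{\sqrt 3 \rho}(f - 1/2)\|_2^2 = \sum_{k\ge 1}(3\rho^2)^k \mathbf{W}_k[f] \le 3\rho^2 \var(f) = 3\rho^2/4$, where the last step uses $(3\rho^2)^k \le 3\rho^2$ for $k \ge 1$ when $3\rho^2 \le 1$. Hence $\mathbb{E}[W^4] \le 9\rho^4/16$, and plugging back in produces the claimed estimate.

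The main obstacle is getting the quartic upper bound on $\phi$ sharp enough that the coefficients align exactly. The constants $2\log e$ and $16(1 - \log e/2)$ are rigid: the quadratic one is forced by $\phi''(0) = 4\log e$, while the quartic one is the smallest value for which the polynomial dominates $\phi$ on the whole interval $[-1/2, 1/2]$ (tight at $t = \pm 1/2$, where $\phi = 1$). A crude alternative such as $\phi(t) \le 4t^2$ only recovers the Witsenhausen--Wyner bound $I \le \rho^2$; the improvement to order $\rho^4$ requires exploiting the exact Taylor series of $\phi$, together with the fortunate arithmetic that $16 \cdot (3/4)^2 = 9$, which is what ties the quartic polynomial bound on $\phi$ to the hypercontractive estimate on $\mathbb{E}[W^4]$ so as to produce precisely the coefficient $9(1 - \log e/2)$.
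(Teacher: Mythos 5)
Your argument is correct and is essentially the proof from the cited source \citet{ordentlich2016improved}: the monograph itself does not prove this proposition but attributes it to that paper and describes its method as ``Fourier analysis and hypercontractivity,'' which is precisely the combination you reconstruct --- a sharp degree-four bound on $\phi(t)=1-h(1/2+t)$ obtained from its Taylor series (with the quartic coefficient $16(1-\tfrac{\log e}{2})$ forced by requiring domination on all of $[-1/2,1/2]$), combined with the Parseval bound $\mathbb{E}[W^2]\le\rho^2/4$ and the $(2,4)$-hypercontractive estimate $\mathbb{E}[W^4]\le 9\rho^4/16$ via the factorization $T_\rho=T_{1/\sqrt3}T_{\sqrt3\rho}$. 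The hypothesis $\rho\le1/\sqrt3$ enters exactly where you use it (validity of the factorization and of $(3\rho^2)^k\le3\rho^2$), and the arithmetic lining up $16\cdot(3/4)^2=9$ reproduces the stated coefficient, so no gaps remain.
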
 The bounds in \eqref{eqn:ww} and \eqref{eqn:ord}
are illustrated in Fig.~\ref{fig:ck}. The bound in \eqref{eqn:ord}
is better than \eqref{eqn:ww} in the range $0<\rho<1/3$. Moreover,
the bound in \eqref{eqn:ord} is asymptotically tight as $\rho\downarrow0$,
i.e., the ratio of the bound in \eqref{eqn:ord} and the right-hand side of \eqref{eq:NICDCKconjecture}
converges to $1$ as $\rho\downarrow0$. This point can be seen from
the fact that by Taylor's theorem, as $\rho\downarrow0$, 
\begin{equation}
1-h\Big(\frac{1-\rho}{2}\Big)=\frac{\log e}{2}\rho^{2}+\frac{\log e}{12}\rho^{4}+O(\rho^{6}).
\end{equation}


In 2016, \citet{samorodnitsky2016entropy} made a significant breakthrough
on   the Courtade--Kumar conjecture. Specifically, he proved the existence of a {\em dimension-independent} interval  for which Conjecture \ref{conj:ck} holds for all $\rho$ in the interval. 

\begin{figure}[!ht]
\centering \includegraphics[width=0.8\columnwidth]{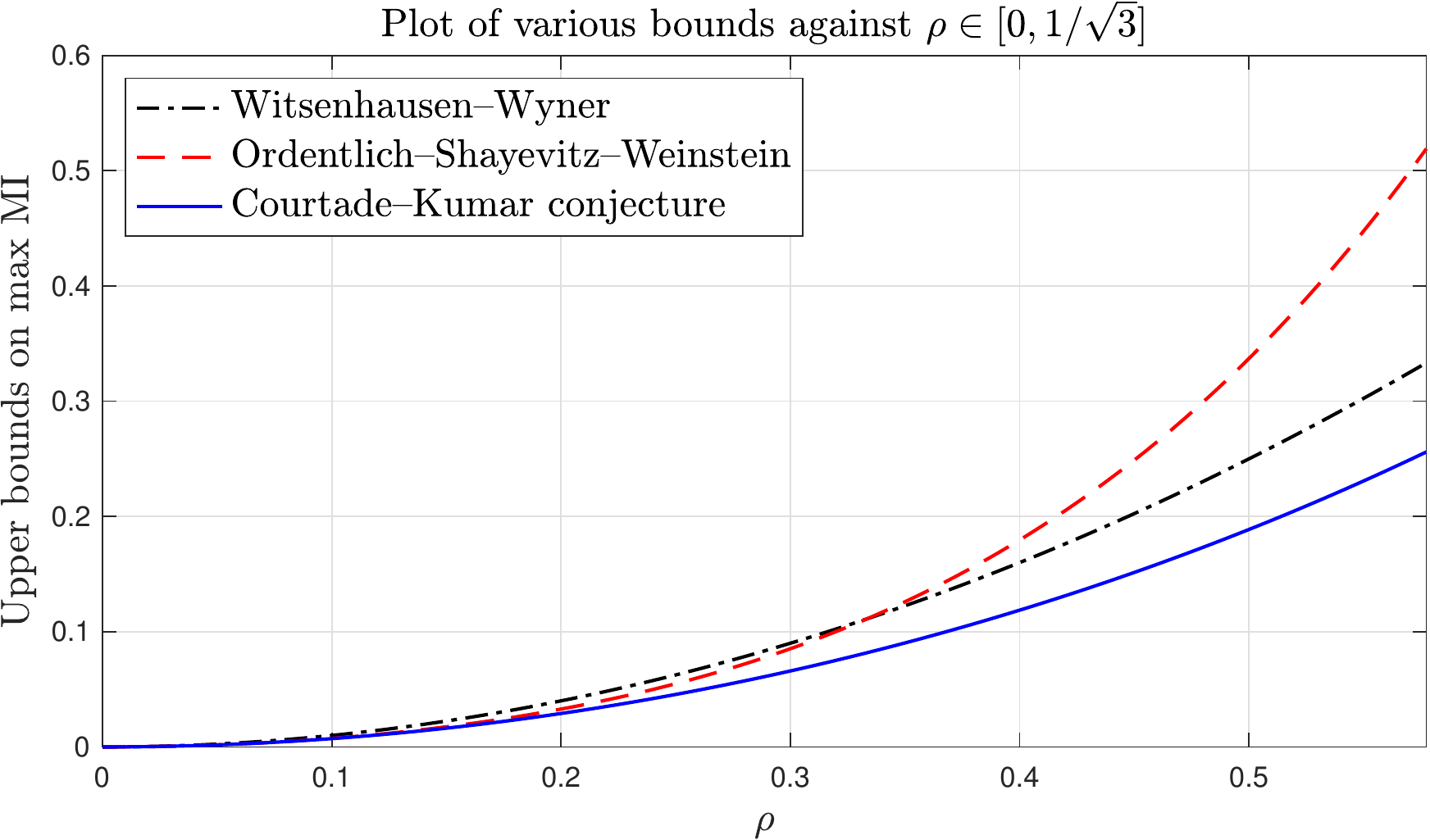}
\caption{Illustration bounds on $\max I(f(X^{n});Y^{n})$ by \citet{witsenhausen1975conditional}
in \eqref{eqn:ww}, \citet{ordentlich2016improved} in \eqref{eqn:ord},
and the Courtade--Kumar conjecture in \eqref{eq:NICDCKconjecture}}
\label{fig:ck}
\end{figure}

\begin{theorem}\label{thm:sam} There exists a constant $0<\rho_{0}<1$
(independent of $n$), such that \eqref{eq:NICDCKconjecture} holds
for any $n\in\bbN$ and any $\rho\in(0,\rho_{0}]$. \end{theorem}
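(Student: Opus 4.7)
The target inequality can be recast in a form that is amenable to Fourier analysis. Since $f$ is balanced, $\bbE[f(X^n)] = 1/2$ and hence $H(f(X^n)) = 1$, so the conjecture becomes
\begin{equation}
\bbE_{Y^n}\big[h(T_\rho f(Y^n))\big] \;\ge\; h\Big(\frac{1-\rho}{2}\Big)
\end{equation}
uniformly in $n$. Let $g := T_\rho f - 1/2$; since $T_\rho$ preserves constants and $\bbE[f]=1/2$, one has $\bbE[g]=0$ and $|g|\le 1/2$. Using the Taylor series
$h(1/2+t) = 1 - \tfrac{1}{\ln 2}\sum_{k\ge 1}\tfrac{(2t)^{2k}}{2k(2k-1)}$ valid for $|t|\le 1/2$, the conjecture is equivalent to the single series estimate
\begin{equation}
\sum_{k\ge 1} \frac{4^k\,\bbE[g^{2k}]}{2k(2k-1)} \;\le\; \sum_{k\ge 1} \frac{\rho^{2k}}{2k(2k-1)}. \label{eq:sam-target}
\end{equation}
First I would reduce the problem to this moment-comparison form, noting that only the \emph{even} moments of $g$ appear because $h$ is symmetric about $1/2$.

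\textbf{Controlling the moments via level-wise hypercontractivity.} The main input is the Fourier decomposition $T_\rho f - 1/2 = \sum_{\ell\ge 1}\rho^\ell f^{=\ell}$, where $f^{=\ell} := \sum_{|\calS|=\ell}\hat f_\calS \chi_\calS$ is the degree-$\ell$ homogeneous part, together with the Bonami--Beckner hypercontractivity estimate $\|f^{=\ell}\|_{2k}\le (2k-1)^{\ell/2}\|f^{=\ell}\|_2$ (which is a restriction of Theorem~\ref{thm:hyper_single} to each Fourier level). Combining these via Minkowski's inequality yields
\begin{equation}
\|g\|_{2k} \;\le\; \sum_{\ell\ge 1}\bigl(\rho\sqrt{2k-1}\bigr)^\ell \|f^{=\ell}\|_2,
\end{equation}
and Parseval gives $\sum_{\ell\ge 1}\|f^{=\ell}\|_2^2 = \mathrm{Var}(f)=1/4$. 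A Cauchy--Schwarz step on the right-hand side, combined with the fact that the level-$1$ weight $\bW_1[f]=\|f^{=1}\|_2^2$ governs the leading behavior as $\rho\downarrow 0$, then produces an upper bound on $\bbE[g^{2k}]$ that depends only on $\rho$, $k$, and the variance of $f$, with no dependence on $n$ whatsoever.

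\textbf{Series comparison and the dimension-free threshold.} Plugging these bounds back into \eqref{eq:sam-target}, the $k=1$ contribution is controlled by $4\bbE[g^2] = 4\sum_\ell \rho^{2\ell}\|f^{=\ell}\|_2^2 \le \rho^2 + (\text{tail})$, where the tail is $O(\rho^4)$, matching the right-hand side to leading order. For $k\ge 2$, the hypercontractive bound gives $4^k\bbE[g^{2k}]\le C_k \rho^{2k}$ with constants $C_k$ growing only polynomially in $k$, so that after dividing by $2k(2k-1)$ the series converges absolutely and is majorized by the right-hand side of \eqref{eq:sam-target} once $\rho$ is smaller than some absolute constant $\rho_0>0$. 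Crucially, $\rho_0$ depends only on the universal constants appearing in Bonami--Beckner and in the Taylor remainder, and not on $n$ or on the specific function $f$.

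\textbf{Main obstacle.} The delicate point is showing that the absolute constant $\rho_0$ produced this way is \emph{strictly positive}: a careless application of Minkowski and Cauchy--Schwarz loses enough in constants that one obtains only a threshold $\rho_0(n)\downarrow 0$ as $n\to\infty$. Samorodnitsky's contribution is precisely a sharpened, \emph{entropy-form} refinement of hypercontractivity that controls $\bbE[h(T_\rho f)]$ directly rather than passing through crude moment bounds; this refinement aligns the leading-order $\rho^2$ coefficient of the entropy with that of the conjectured extremizer (the dictator) and absorbs the Fourier-tail losses into a higher-order correction. Executing this sharpening---and verifying that the resulting $\rho_0$ can be taken independent of $n$---is where the real work lies.
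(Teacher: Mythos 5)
Your reduction to the moment‐comparison inequality \eqref{eq:sam-target} is a correct reformulation, but the argument you propose to verify it does not close, and the obstruction is more severe than your final paragraph concedes. Level‐wise Bonami--Beckner plus Minkowski gives $\|g\|_{2k}\le\sum_{\ell\ge 1}(\rho\sqrt{2k-1})^{\ell}\|f^{=\ell}\|_{2}$, and the subsequent Cauchy--Schwarz step requires summing the geometric series $\sum_{\ell}(\rho\sqrt{2k-1})^{2\ell}$, which \emph{diverges} as soon as $\rho\sqrt{2k-1}\ge 1$, i.e., for every $k>\frac{1}{2}(1+\rho^{-2})$. For those $k$ you have no bound at all, let alone one with ``$C_{k}$ growing only polynomially in $k$.'' The only universal bound for large $k$ is the trivial $\bbE[g^{2k}]\le 4^{-k}$, which carries no $\rho$-dependence and cannot be dominated by the $\rho^{2k}$ terms on the right. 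So the issue is not merely a degradation to a threshold $\rho_{0}(n)\downarrow 0$; the series estimate fails outright for all $n$ once $k$ is large relative to $1/\rho^{2}$.

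Beyond this quantitative gap, your sketch omits the two structural ingredients the paper identifies as driving Samorodnitsky's argument: \emph{random restrictions} and the \emph{Friedgut--Kalai--Naor (FKN) theorem}. The actual proof does not try to control all the moments of $T_{\rho}f$ uniformly; it instead splits into cases according to whether $f$ has nearly all its Fourier weight on levels $\le 1$. In the ``near-dictator'' regime FKN supplies the structural rigidity needed to match the entropy of a true dictator up to lower order, while in the complementary regime a random-restriction argument reduces the dimension and exploits the strict gain in $\bW_{1}$ away from dictators (the role played by the bound of \citet{ordentlich2016improved}). Your ``entropy-form refinement of hypercontractivity'' gestures at the right object but is exactly the content that has to be supplied; without the FKN dichotomy and the restriction machinery, the proposal does not establish that a dimension-free $\rho_{0}>0$ exists.
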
 

The proof by \citet{samorodnitsky2016entropy}  is based on  Fourier analysis,
random restrictions, techniques in \citet{ordentlich2016improved},
the Friedgut--Kalai--Naor (FKN) theorem \cite{kahn1988influence},
among others. Samorodnitsky's proof is highly technical,
so we do not present it here.   However, we should note that
in the proof of Theorem \ref{thm:sam}, $\rho_{0}$, which is not explicitly provided, is assumed to
be ``sufficiently small''. It is also worth noting that the conclusion that the value
$\rho_{0}$ is independent of $n$ (resulting in a {\em dimension-independent} interval $(0,\rho_0]$)  is the crux of this theorem.
Indeed, if we allow  $\rho_{0}$ to vary with $n$, then
the resulting theorem is merely an extension of Theorem \ref{thm:StabDerivative-1-1}
to the case $q=1$, which can be proved by combining the bound in~\eqref{eqn:ord}
by \citet{ordentlich2016improved} and the discreteness of the space
of Boolean functions; see \cite[Corollary 1]{ordentlich2016improved}.
This fact can also be deduced   using calculus \cite{yang2019most}.

Using Fourier analysis and optimization theory, the first author of
this monograph \cite{yu2021phi} provided an explicit threshold for Theorem \ref{thm:sam}. Specifically, he showed that \eqref{eq:NICDCKconjecture}
holds for any $n$ and any $\rho\in(0,\rho_{1}]$, where $\rho_{1}$
be the solution in $(0,1)$ to the equation 
\begin{equation}
(1+\rho^{2})\log\Big(\frac{1+\rho}{2}\Big)-(1-\rho)^{2}\log\Big(\frac{1-\rho}{2}\Big)=0.\label{eq:NICDpsi-1}
\end{equation}
The value of $\rho_{1}\approx0.461491$. 

In the Courtade--Kumar conjecture, if the Boolean function is set
to a dictator function $f(x^{n})=x_{1}$ (say), then the objective
function $I(f(X^{n});Y^{n})=I(X_{1};Y_{1})$. Motivated by this, in addition to 
the original Courtade--Kumar conjecture (in which $f$ is an arbitrary Boolean function and not required to
satisfy $\bbE[f(X^{n})]=1/2$), Courtade and Kumar also proposed a
weaker version of this conjecture. They conjectured that for any $n\in\bbN$
and $\rho\in(0,1)$, 
\begin{equation}
\max_{\textrm{Boolean }f,g}I(f(X^{n});g(Y^{n}))=1-h\Big(\frac{1-\rho}{2}\Big).\label{eq:NICDCKconjecture-1}
\end{equation}
This weaker version was proven by \citet{pichler2018dictator} by
using Fourier analysis and a novel partitioning technique.

\begin{theorem} The equality in~\eqref{eq:NICDCKconjecture-1} holds
for all $(n, \rho)\in\bbN\times (0,1)$. \end{theorem}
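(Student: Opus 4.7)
The plan is to establish the upper bound $\max_{f,g}I(f(X^n);g(Y^n)) \le 1 - h((1-\rho)/2)$; the matching lower bound is attained by the dictator pair $f(x^n)=x_1,\,g(y^n)=y_1$, for which $I(X_1;Y_1) = 1 - h((1-\rho)/2)$ directly. Throughout, set $U := f(X^n)$, $V := g(Y^n)$, $a := \bbE[U]$, $b := \bbE[V]$, and $p := \Pr(U=V=1)$; we may restrict to $a,b\in(0,1)$, as otherwise $I(U;V)=0$.

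The first step I would apply is Witsenhausen's maximal-correlation theorem (Theorem~\ref{thm:wit}) along the Markov chain $U-X^n-Y^n-V$: by the tensorization and data-processing properties of the HGR maximal correlation, combined with the identity $\rho_{\mathrm{m}}(U;V) = |\rho(U;V)|$ for binary variables, one obtains $|p - ab| \le \rho\,\sqrt{a(1-a)\,b(1-b)}$.

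Next, I would treat $I(U;V)$ as a function of the parameter $p$ with marginals $(a,b)$ fixed. A short computation gives $\partial_p I(U;V) = \log \frac{p\,(1-a-b+p)}{(a-p)\,(b-p)}$, from which it follows that $p \mapsto I(U;V)$ vanishes at $p = ab$ and is strictly monotone in $|p-ab|$ on the feasible interval $[\max\{0,a+b-1\},\min\{a,b\}]$. Combining with the first step yields $I(U;V) \le I^{\star}(a,b)$, where $I^{\star}(a,b)$ denotes the mutual information of the pair of Bernoullis with marginals $(a,b)$ and Pearson correlation exactly $\rho$. It then remains to maximize $I^{\star}$ over $(a,b) \in [0,1]^2$. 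Boundary cases trivially give $I^{\star} = 0$. The invariances $(a,b)\mapsto(b,a)$ and $(a,b)\mapsto(1-a,1-b)$ force any interior critical point to lie at $(1/2,1/2)$, and direct substitution gives $I^{\star}(1/2,1/2) = 1 - h((1-\rho)/2)$; a second-order check (or the monotonicity of $a\mapsto I^{\star}(a,a)$ on each side of $1/2$) confirms global optimality.

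The hardest step is the global unimodality claim for $I^{\star}$, since the nonlinearity $\sqrt{a(1-a)\,b(1-b)}$ enters the four-atom joint entropy and the Hessian at $(1/2,1/2)$ is delicate to sign over the entire unit square. The ``novel partitioning technique'' of Pichler, Piantanida, and Matz likely addresses precisely this obstruction---plausibly by partitioning $[0,1]^2$, or equivalently the Fourier support of $(f,g)$, into subregions in which the degree-$1$ and higher-degree contributions to $p$ can be handled separately; that same split would also refine the maximal-correlation bound in the first step whenever $f$ or $g$ carries Fourier mass above level one, tightening the chain above outside the dictator regime.
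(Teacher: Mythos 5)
Your reduction is valid up to a point, but the key optimization claim at the end is false, so the proof does not go through. Witsenhausen's bound (Theorem~\ref{thm:wit}) does give $|p-ab|\le\rho\sqrt{a\bar a\,b\bar b}$ for $U=f(X^n)$, $V=g(Y^n)$, and the convexity of $p\mapsto I(U;V)$ with minimum at $p=ab$ is correct, so you validly obtain $I(U;V)\le\max_{a,b}I^{\star}(a,b)$ where $I^{\star}(a,b)$ is the mutual information of a Bernoulli pair with marginals $(a,b)$ and correlation $\rho$. The problem is that $\max_{(a,b)\in[0,1]^2}I^{\star}(a,b)$ is \emph{strictly larger} than $1-h((1-\rho)/2)$, so this relaxation cannot yield the theorem. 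Concretely, for $\rho=1/2$ and $(a,b)=(1/2,1/5)$ the correlation-$\rho$ endpoint is $p=ab+\rho\sqrt{a\bar a\,b\bar b}=1/10+1/10=1/5$, giving the joint pmf $(0.2,\,0.3,\,0,\,0.5)$ and
\begin{equation}
I^{\star}\Big(\tfrac12,\tfrac15\Big)=1+h\big(\tfrac15\big)-H_4(0.2,0.3,0,0.5)\approx 0.2365,
\end{equation}
whereas $1-h(1/4)\approx 0.1887$. Thus $(1/2,1/2)$ is a critical point of $I^{\star}$ but \emph{not} its global maximizer: the maximum lies near the feasibility boundary of the Witsenhausen constraint, and there $I^{\star}$ approaches the cruder Witsenhausen--Wyner bound of $\rho^2$ in~\eqref{eqn:ww} rather than the sharp value.

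The invariance argument in your final step does not deliver what you need. The symmetries $(a,b)\mapsto(b,a)$ and $(a,b)\mapsto(1-a,1-b)$ only show that the critical set is invariant under them; they do not force all interior critical points to coincide with the fixed point $(1/2,1/2)$, and they say nothing about whether that fixed point is a maximum (the counterexample shows it is not). Monotonicity of $a\mapsto I^{\star}(a,a)$ along the diagonal is consistent with the global maximum being attained off-diagonal, which is exactly what happens. More fundamentally, your chain of inequalities uses only the single scalar $\rho_{\mathrm{m}}(X;Y)=\rho$ about the source; it would therefore give the same bound for \emph{any} bivariate source with maximal correlation $\rho$, which a source-specific answer such as $1-h((1-\rho)/2)$ cannot satisfy. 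The Pichler--Piantanida--Matz proof cited in the paper does not refine the optimization over $(a,b)$; it works with the Boolean Fourier spectra of $f$ and $g$ on the hypercube directly and develops a partitioning argument that captures constraints on the joint distribution of $(U,V)$ that maximal correlation alone cannot see. That extra structure is precisely why the result closes the gap between $\rho^2$ and $1-h((1-\rho)/2)$.
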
 

Since the Li--M\'edard conjecture was only recently posed (at the
time of writing), there is less progress on it compared to the Courtade--Kumar
conjecture. Hence, we do not elaborate on it  apart from mentioning
some partial progress by \citet{yu2021phi} for a certain set of~$(q,\rho)$.


Finally, we summarize some recent progress on the Mossel--O'Donnell conjecture,
which states that dictator functions are optimal in attaining both
the asymmetric and symmetric max $q$-stabilities for $2<q\le9$
(and for any $n\in\bbN$ and any $\rho\in(0,1)$). As discussed  in Theorems
\ref{thm:StabDerivative-1} and~\ref{thm:StabDerivative-1-1}, the
limiting cases as $\rho\downarrow0$ and $\rho\uparrow1$ (with fixed $n$) were resolved in~\cite{mossel2005coin} for the symmetric case
and in \cite{mossel2005coin, li2021boolean} for the asymmetric case.
However, for other intermediate values of~$\rho$, there has been
fairly limited progress. 
 For the symmetric case, the best known result is
Mossel and O'Donnell's result in Theorem \ref{thm:mos-stability}; this result 
resolved the eponymous conjecture for $q=3$ and for any $\rho \in (0,1)$.
Combining this with the result of \citet{barnes2020courtade} (in
Lemma~\ref{lem:BarnesOzgur}) yields the conclusion the Mossel--O'Donnell conjecture
holds for all $2<q\le3$. There is even less progress for the
asymmetric case in which the   best known result remains
that of Witsenhausen's result in Corollary~\ref{thm:wit-stability} for the
case $q=2$.
 Recently, in \cite{yu2021phi},
the first author of this monograph made some progress on the Mossel--O'Donnell
conjecture. He showed that the symmetric version of the Mossel--O'Donnell
conjecture holds for $2<q\le5$, and the asymmetric version holds
for $2<q\le3$. These imply that $3\le q_{\max}<10$ and $5\le \breve{q}_{\max}<10$.
The proofs are based on Fourier analysis and optimization theory.

\section{Moderate and Large Deviations Regimes}
\label{sec:MD-and-LD}

In this section, we consider the max $q$-stabilities in the MD and LD regimes.
Recall  the definition of  the asymmetric max $q$-stability with $q\in[1,\infty)$  in \eqref{eqn:asymp_max_q}.
It can be rewritten as  
\begin{align}
{\Gamma}_{\rho}^{(q)}(a)%
 & =\bigg(\max_{\calA  \subset\{0,1\}^n:\pi_{X}^{n}(\calA)\le a}\Vert \pi_{X|Y}^{n}(\calA|Y^n)\Vert_{q}\bigg)^{q} , \label{eqn:asym_max_lq}
\end{align}
where the maximization is over all subsets of $\{0,1\}^n$. We now extend the asymmetric max $q$-stability  to the case of  $q\in(-\infty,1)\backslash\{0\}$. 
For $q\in(-\infty,1)\backslash\{0\}$, define 
\begin{align}
{\Gamma}_{\rho}^{(q)}(a) & :=\bigg(\min_{\calA \subset\{0,1\}^n:\pi_{X}^{n}(\calA)\ge a}\Vert \pi_{X|Y}^{n}(\calA|Y^n)\Vert_{q}\bigg)^{q}.\label{eqn:asym_min_lq}
\end{align}
We note that even though a $\min$ is present in \eqref{eqn:asym_min_lq}, we still term this quantity as the asymmetric {\em max} $q$-stability. 

We are now interested in the MD and LD asymptotics of~\eqref{eqn:asym_max_lq} and~\eqref{eqn:asym_min_lq}. Similarly to the $2$-user NICD problem,
in the LD regime, the parameter $a$ is assumed to vanish exponentially fast
as $n\to\infty$, i.e., $a=2^{-n\alpha}$ for some fixed constant
$\alpha\in(0,1)$. In the MD regime, $a$ is assumed to vanish subexponentially
fast, i.e., $a=2^{-\theta_{n}\alpha}$ for an MD sequence $\{\theta_n\}_{n\in\bbN}$. 
%

\begin{definition} \label{def:LDMD} We define the LD and MD exponents corresponding to the quantities in \eqref{eqn:asym_max_lq} and \eqref{eqn:asym_min_lq} as follows.
\begin{enumerate}
\item  For $n\ge1$, $\alpha\in[0,1]$, and $q\ge 1$, define the {\em LD exponent}
as 
\begin{align}
\hspace{-.3in}{\Upsilon}_{q,\mathrm{LD}}^{(n)}(\alpha) & :=-\frac{1}{n}\log\max_{\substack{\calA:\pi_{X}^{n}(\calA)\leq2^{-n\alpha}}
}\Vert \pi_{X|Y}^{n}(\calA|Y^n)\Vert_{q}.\label{eq:NICD-FLD}
\end{align}
For $q\in(-\infty,1)\backslash\{0\}$,  ${\Upsilon}_{q,\mathrm{LD}}^{(n)}(\alpha)$ is defined similarly but with the maximization in \eqref{eq:NICD-FLD} replaced by  a minimization, and 
the inequality reversed.  
\item  For $n\ge1$, $\alpha\in[0,\infty)$, $q\ge1$, and an MD sequence $\{\theta_{n}\}_{n\in\bbN}$, define
the {\em MD exponent} as 
\begin{align}
\hspace{-.3in}{\Upsilon}_{q,\mathrm{MD}}^{(n)}(\alpha) :=-\frac{1}{\theta_{n}}\log \max_{\substack{\calA :\pi_{X}^{n}(\calA)\leq2^{-\theta_{n}\alpha}}
}\Vert \pi_{X|Y}^{n}(\calA|Y^n)\Vert_{q} .\label{eq:NICD-RMD}
\end{align}
For $q\in(-\infty,1)\backslash\{0\}$,  ${\Upsilon}_{q,\mathrm{MD}}^{(n)}(\alpha)$ is defined similarly but with the maximization in \eqref{eq:NICD-RMD} replaced by  a minimization, and 
the inequality reversed.  
\item  Define ${\Upsilon}_{q,\mathrm{MD}}^{(\infty)}$ 
and ${\Upsilon}_{q,\mathrm{LD}}^{(\infty)}$ as the pointwise
limits of  \eqref{eq:NICD-FLD} and~\eqref{eq:NICD-RMD} as $n\to\infty$. 
\end{enumerate}
\end{definition}

Note that in the definitions in \eqref{eq:NICD-FLD}--\eqref{eq:NICD-RMD}, we remove the $q^{\mathrm{th}}$ power in \eqref{eqn:asym_max_lq}--\eqref{eqn:asym_min_lq}. This slight modification will result in a multiplicative factor of $q$ in the characterizations of these exponents.
We  deliberately choose such definitions since the bounds on the exponents  in Definition~\ref{def:LDMD}
 provided in the following  two theorems will be  consistent with
the bounds for the $2$-user NICD problem. We also remark that these quantities depend on $\rho$ but these dependencies are suppressed to avoid notational clutter in what follows.


For $q\in(1-\rho^{-2},\infty)\backslash\{0\}$ and $\alpha>0$,  let
\begin{align}
{\Upsilon}_{q,\mathrm{MD}}(\alpha):= &  \frac{\alpha}{1+(q-1)\rho^{2}}.\label{eq:UpsilonMD}
\end{align}
By using the single-function versions of hypercontractivity inequalities
(Theorem~\ref{thm:hyper_single}), we can  obtain the following
result. \begin{theorem}[$q$-stability] \label{thm:strong-q-stability}
Let $n\ge1$ and $\ensuremath{\alpha>0}$. For $q\ge1$, 
\begin{align}
{\Upsilon}_{q,\mathrm{MD}}^{(n)}(\alpha) & \ge{\Upsilon}_{q,\mathrm{MD}}(\alpha),\label{eq:NICDMDstability}
\end{align}
and for $q\in(1-\rho^{-2},1)\backslash\{0\}$, 
\begin{align}
{\Upsilon}_{q,\mathrm{MD}}^{(n)}(\alpha) & \leq{\Upsilon}_{q,\mathrm{MD}}(\alpha).\label{eq:NICDMDstability2}
\end{align}
Moreover, these two bounds are asymptotically tight, i.e., for $q\in(1-\rho^{-2},\infty)\backslash\{0\}$, 
\begin{equation}
{\Upsilon}_{q,\mathrm{MD}}^{(\infty)}(\alpha)={\Upsilon}_{q,\mathrm{MD}}(\alpha). \label{eq:Exp_MD}
\end{equation}
Lastly, for $q\in(-\infty, 1-\rho^{-2}]$, 
$
{\Upsilon}_{q,\mathrm{MD}}^{(\infty)}(\alpha)= \infty.
$
The equalities   are achieved by sequences of Hamming balls or spherical shells. 
\end{theorem}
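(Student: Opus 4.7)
The plan is to extract the two inequality bounds in \eqref{eq:NICDMDstability} and \eqref{eq:NICDMDstability2} directly from the single-function hypercontractivity inequality (Theorem~\ref{thm:hyper_single}) applied to indicator functions, and then to establish asymptotic tightness via Hamming spherical shells using the moderate-deviations machinery developed in Section~\ref{sec:spheres}. For the inequality part, I take $f = \bone_{\calA}$ so that $\|f\|_p = a^{1/p}$ whenever $p>0$, where $a:=\pi_X^n(\calA)$. The boundary of the forward (respectively reverse) hypercontractivity region at the prescribed $q$ is attained by $p = 1+(q-1)\rho^{2}$, since substituting the H\"older conjugate $q' = q/(q-1)$ into $(p-1)(q'-1) = \rho^2$ yields exactly this value. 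For $q \ge 1$ one checks $p\in[1,\infty)$, so $(p,q')\in\mathcal{R}_{\mathrm{FH}}(\rho)$ and forward hypercontractivity gives $\|T_\rho f\|_q \le a^{1/p}$; for $q\in(1-\rho^{-2},1)\setminus\{0\}$ one checks $p\in(0,1)$, so $(p,q')\in\mathcal{R}_{\mathrm{RH}}(\rho)$ and reverse hypercontractivity gives $\|T_\rho f\|_q \ge a^{1/p}$. Taking $-\theta_n^{-1}\log(\cdot)$ and using $a\le 2^{-\theta_n\alpha}$ (resp.\ $a\ge 2^{-\theta_n\alpha}$) together with monotonicity of $a\mapsto a^{1/p}$ for $p>0$ then yields \eqref{eq:NICDMDstability} and \eqref{eq:NICDMDstability2} with $p={\Upsilon}_{q,\mathrm{MD}}(\alpha)^{-1}\alpha$.

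For the asymptotic tightness statement \eqref{eq:Exp_MD}, I would pick $\mathcal{A}_n$ to be the spherical shell of $x^n$ with empirical mean in $\tfrac{1}{2}+[\lambda,\lambda+\epsilon]\sqrt{\theta_n/n}$, with $\lambda$ chosen so that $\pi_X^n(\mathcal{A}_n)\doteq 2^{-\theta_n\alpha}$ (i.e.\ $2\lambda^2/\ln 2 = \alpha$ by the MD estimate in~\eqref{eq:NICDmd}). As in Section~\ref{sec:spheres}, $T_\rho\bone_{\mathcal{A}_n}(y^n)$ depends on $y^n$ only through its empirical mean $\tfrac{1}{2}+\mu\sqrt{\theta_n/n}$, and a local-CLT / Laplace calculation shows
\begin{equation}
T_\rho\bone_{\mathcal{A}_n}(y^n)\doteq \exp\!\Big(-\theta_n\cdot \tfrac{2(\lambda-\rho\mu)^{2}}{1-\rho^{2}}\Big).
\end{equation}
Decomposing $\bbE_{Y^n}[(T_\rho\bone_{\mathcal{A}_n})^{q}]$ over the types of $Y^n$ and using $\pi_Y^n(T_Y)\doteq\exp(-\theta_n\cdot 2\mu^2)$, Laplace's method reduces the exponent of $\bbE[(T_\rho\bone_{\mathcal{A}_n})^q]$ to $\min_{\mu}\bigl\{2\mu^{2}+2q(\lambda-\rho\mu)^{2}/(1-\rho^{2})\bigr\}$. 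The stationarity condition gives $\mu^{*}=q\rho\lambda/p$ with $p=1+(q-1)\rho^{2}$, and substituting back yields the minimizing value $2q\lambda^{2}/p$; therefore $\|T_\rho\bone_{\mathcal{A}_n}\|_q\doteq\exp(-\theta_n\cdot 2\lambda^{2}/p)$, which converts to $-\theta_n^{-1}\log_2\|T_\rho\bone_{\mathcal{A}_n}\|_q\to\alpha/p={\Upsilon}_{q,\mathrm{MD}}(\alpha)$, matching the hypercontractivity bound.

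The main obstacle will be the boundary/degenerate regime $q\in(-\infty,1-\rho^{-2}]$, in which $p\le 0$ so neither hypercontractivity side yields a useful inequality and the Laplace integrand $2\mu^{2}+2q(\lambda-\rho\mu)^{2}/(1-\rho^{2})$ becomes concave in $\mu$ (its second derivative being $4p/(1-\rho^{2})\le 0$), making the unconstrained minimum equal to $-\infty$. To prove ${\Upsilon}^{(\infty)}_{q,\mathrm{MD}}(\alpha)=\infty$ I would argue that for $q<0$ the atypical values of $\mu$ with $|\mu|$ large contribute a term $(T_\rho\bone_{\mathcal{A}_n}(y^n))^{q}$ that grows super-exponentially in $\theta_n$, so $\bbE[(T_\rho\bone_{\mathcal{A}_n})^{q}]$ diverges super-exponentially and $\|T_\rho\bone_{\mathcal{A}_n}\|_q=\bbE[\cdot]^{1/q}$ decays faster than every exponential in $\theta_n$; making this rigorous will require letting the shell radius $\lambda$ (or a second companion shell used to control the conditioning) scale with $\theta_n$, and carefully handling the interchange of the range of $\mu$ with the type-class discretization. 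For real-valued rather than integer $q$ the usual tensorisation arguments do not apply, so the Laplace analysis really is the workhorse throughout.
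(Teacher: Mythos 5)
Your proof of the two inequalities \eqref{eq:NICDMDstability} and \eqref{eq:NICDMDstability2} is exactly the paper's argument: apply the single-function hypercontractivity inequalities of Theorem~\ref{thm:hyper_single} to $f=\bone_{\calA}$ at the boundary exponent $p=1+(q-1)\rho^2$ and take $-\theta_n^{-1}\log(\cdot)$. Your Laplace-method verification of asymptotic tightness via spherical shells is also what the paper intends, just spelled out; the stationarity calculation $\mu^*=q\rho\lambda/p$ and the value $2q\lambda^2/p$ are correct, and the $\exp(-\theta_n\cdot 2\lambda^2/p)$ exponent matches $\Upsilon_{q,\mathrm{MD}}(\alpha)=\alpha/p$ as required. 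Up to this point you are on the paper's route.

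Where you diverge is the degenerate regime $q\in(-\infty,1-\rho^{-2}]$. You flag this as "the main obstacle'' and propose to estimate $\bbE[(T_\rho\bone_{\calA_n})^q]$ directly by showing the atypical-type contributions produce super-exponential growth, then to make this rigorous by letting the shell parameter $\lambda$ scale with $\theta_n$ and controlling the discretization of $\mu$. That route can be made to work, but it is unnecessarily laborious and the specific fixes you sketch (scaling $\lambda$, adding a companion shell) are not obviously the right ones. The paper dispatches this case in one line using a soft argument: the $L^q$-(pseudo-)norm $\|T_\rho\bone_{\calA}\|_q$ is nondecreasing in $q$, so $\Upsilon_{q,\mathrm{MD}}^{(n)}(\alpha)$ is nonincreasing in $q$; for any $\tilde q>1-\rho^{-2}$ the already-established equality $\Upsilon_{\tilde q,\mathrm{MD}}^{(\infty)}(\alpha)=\alpha/(1+(\tilde q-1)\rho^2)$ diverges as $\tilde q\downarrow 1-\rho^{-2}$, and monotonicity then forces $\Upsilon_{q,\mathrm{MD}}^{(\infty)}(\alpha)=\infty$ for all $q\le 1-\rho^{-2}$ without any further computation. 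I recommend replacing your hands-on analysis of the degenerate integral with this monotone-limit argument: it avoids the delicate interchange of limits you correctly worry about, and it reuses the nondegenerate tightness result you have already proved.
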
 
\begin{figure}
\centering %
 \includegraphics[width=0.75\columnwidth]{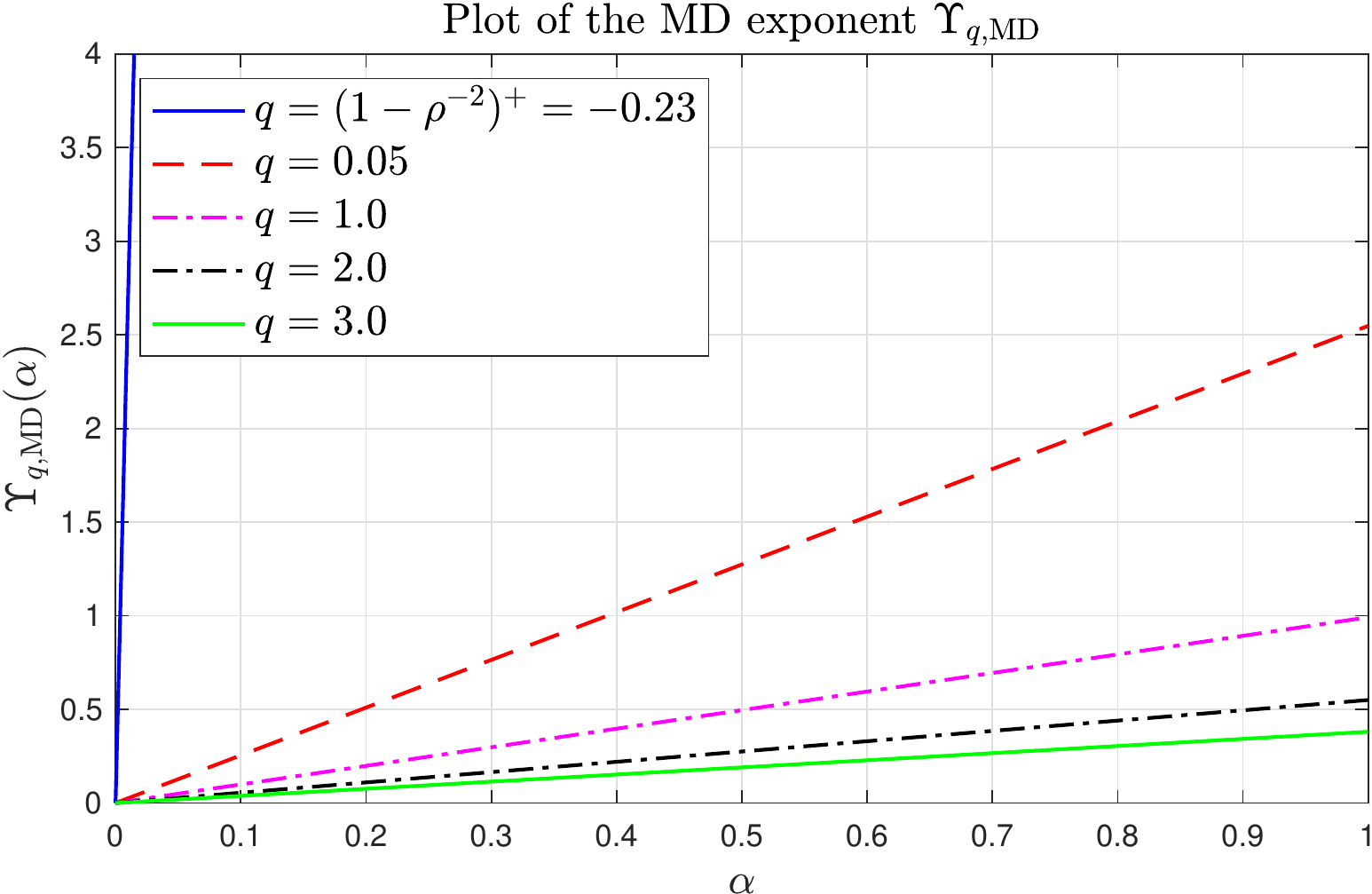}  
\caption{The MD  exponent of the $q$-stability  ${\Upsilon}_{q,\mathrm{MD}}$ for $\rho=0.9$. Observe that ${\Upsilon}_{q,\mathrm{MD}}$ is linear given each $q\neq 0$ and diverges as $q\downarrow 1-\rho^{-2}\approx -0.2346$.  }
\label{fig:q_stability_MD} 
\end{figure}

The function ${\Upsilon}_{q,\mathrm{MD}}$, defined in \eqref{eq:UpsilonMD}, is plotted in Fig.~\ref{fig:q_stability_MD}. 
\begin{proof}[Proof of Theorem \ref{thm:strong-q-stability}]
This theorem  is a consequence
of the classic hypercontractivity inequalities in \eqref{eq:NICDFHC-1}
and \eqref{eq:NICDRHC-1}. Substituting $f\leftarrow\bone_{\calA}$
into \eqref{eq:NICDFHC-1} and \eqref{eq:NICDRHC-1}, we obtain for
$q\ge1$, 
\begin{equation}
\Vert \pi_{X|Y}^{n}(\calA|Y^n)\Vert_{q}\le \pi_{X}^{n}(\calA)^{\frac{1}{1+(q-1)\rho^{2}}},
\end{equation}
and for  $q\in(1-\rho^{-2},1)\backslash\{0\}$,
\begin{equation}
\Vert \pi_{X|Y}^{n}(\calA|Y^n)\Vert_{q}\ge \pi_{X}^{n}(\calA)^{\frac{1}{1+(q-1)\rho^{2}}}.
\end{equation}
These inequalities immediate imply \eqref{eq:NICDMDstability} and
\eqref{eq:NICDMDstability2}.

The asymptotic tightness of \eqref{eq:NICDMDstability} and \eqref{eq:NICDMDstability2}
can be verified by choosing the sets $\calA$ in the definition of the MD exponent 
to be sequences of Hamming balls or spherical
shells. The asymptotic tightness for  $q\in(-\infty, 1-\rho^{-2}]$ follows by the monotonicity of the $L^q$-norm in $q$, and taking limits as $q \downarrow 1-\rho^{-2}$ in \eqref{eq:Exp_MD}. We omit the details.  \end{proof}

We now turn our attention to the LD exponent. For $q\neq 0$, define 
\begin{equation}
\theta_{q}(Q_{X},Q_{Y}):=\rvD(Q_{X},Q_{Y}\|\pi_{XY})-\frac{D(Q_{Y}\|\pi_{Y})}{q'}.\label{eq:NICDtheta_q}
\end{equation}
where $q'$ is the H\"older conjugate of $q$
Define 
\begin{equation}
{\Upsilon}_{q,\mathrm{LD}}(\alpha):=\inf_{ Q_{X},Q_{Y}:
D(Q_{X}\|\pi_{X})\geq\alpha
}\theta_{q}(Q_{X},Q_{Y})\label{eq:NICDunderTheta_q}
\end{equation}
for $q\ge1$, and  
\begin{align}
 & {\Upsilon}_{q,\mathrm{LD}}(\alpha):=\left\{ \!\begin{array}{cc}
{\displaystyle \sup_{Q_{X}:D(Q_{X}\|\pi_{X})\le\alpha}\inf_{Q_{Y}}\theta_{q}(Q_{X},Q_{Y})} & 0<q<1\vspace{.03in}\\
{\displaystyle \sup_{Q_{X}:D(Q_{X}\|\pi_{X})\le\alpha}\sup_{Q_{Y}}\theta_{q}(Q_{X},Q_{Y})} & q<0
\end{array}\right.\label{eq:NICDoverTheta_q}
\end{align}
for $q\in(-\infty,1)\backslash\{0\}$. 
It can be   verified that  ${\Upsilon}_{q,\mathrm{LD}}(\alpha)\ge0$ for all $q \neq 0$. Asymptotically tight bounds  are provided in  the following
theorem, which  is known as the {\em strong $q$-stability theorem} and
was proved by the first author of this monograph~\cite{yu2021strong}.

\begin{theorem}[Strong $q$-stability] \label{thm:q-stability} For
any $n\ge1$ and $\alpha\in(0,1)$, it holds that for $q\ge1$, 
\begin{align}
{\Upsilon}_{q,\mathrm{LD}}^{(n)}(\alpha) & \ge\cvx[{\Upsilon}_{q,\mathrm{LD}}](\alpha),\label{eq:stab-FLD}
\end{align}
and for $q\in(-\infty,1)\backslash\{0\}$, 
\begin{align}
{\Upsilon}_{q,\mathrm{LD}}^{(n)}(\alpha) & \leq\cve[{\Upsilon}_{q,\mathrm{LD}}](\alpha).\label{eq:stab-RLD}
\end{align}
Moreover, these two bounds are asymptotically tight, i.e., 
\begin{equation}
{\Upsilon}_{q,\mathrm{LD}}^{(\infty)}(\alpha)=\cvx[{\Upsilon}_{q,\mathrm{LD}}](\alpha)\quad\mbox{and}\quad{\Upsilon}_{q,\mathrm{LD}}^{(\infty)}(\alpha)=\cve[{\Upsilon}_{q,\mathrm{LD}}](\alpha),\label{eqn:ld_balls_sphere}
\end{equation}
and these equalities are achieved by sequences
of Hamming balls or spheres. 
\end{theorem}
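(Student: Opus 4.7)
The plan is to establish Theorem~\ref{thm:q-stability} in two parts: achievability (asymptotic attainment by Hamming-sphere constructions combined with time-sharing) and converse (dimension-free bounds derived via hypercontractivity).

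For achievability, I would take $\calA$ to be a union of type classes $\calT_{Q_X}\subset\{0,1\}^n$ with $D(Q_X\|\pi_X)\geq\alpha$ when $q\geq 1$, or with $D(Q_X\|\pi_X)\leq\alpha$ when $q<1$. Using the method of types and Sanov's theorem applied to the conditional distribution $\pi_{X|Y}^n(\cdot|y^n)$, together with the standard identity $D(Q_{X|Y}\|\pi_{X|Y}|Q_Y)=D(Q_{XY}\|\pi_{XY})-D(Q_Y\|\pi_Y)$, one obtains for $y^n\in\calT_{Q_Y}$ that
\begin{equation}
\pi_{X|Y}^n(\calT_{Q_X}|y^n)\doteq 2^{-n[\rvD(Q_X,Q_Y\|\pi_{XY})-D(Q_Y\|\pi_Y)]}.
\end{equation}
Evaluating the $L^q$-norm by partitioning the outer sum according to the type $Q_Y$ of $y^n$ (each class having $\pi_Y^n$-mass $\doteq 2^{-nD(Q_Y\|\pi_Y)}$) and invoking Laplace's method yields the exponent $\inf_{Q_Y}\theta_q(Q_X,Q_Y)$, which upon optimizing over $Q_X$ recovers the single-letter ${\Upsilon}_{q,\mathrm{LD}}(\alpha)$ defined in~\eqref{eq:NICDunderTheta_q}--\eqref{eq:NICDoverTheta_q}. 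The convex (respectively concave) envelope is then attained by choosing $\calA$ as a union of at most three type classes with prescribed relative cardinalities; Carath\'eodory's theorem applied to the scalar constraint on $D(Q_X\|\pi_X)$ justifies the restriction to three.

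For the converse, my approach is to invoke the information-theoretic characterization of the two-function hypercontractivity inequality, which is developed in Section~\ref{ch:funineq} and is the technique already alluded to in the sketch of Theorems~\ref{thm:strongsse} and~\ref{thm:strongsse-2}. Schematically, the two-function formulation bounds $\langle\bone_\calA,T_\rho g\rangle$ for arbitrary nonnegative test functions $g$, and admits a variational rewrite in terms of relative entropies between couplings and the product source $\pi_{XY}^n$. Substituting $\bone_\calA$ on both sides and optimizing over the auxiliary~$g$ yields a bound on $\|\pi_{X|Y}^n(\calA|Y^n)\|_q$ in terms of $\pi_X^n(\calA)$ and single-variable relative-entropy quantities. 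Tensorization of the resulting inequality together with the introduction of a uniform time-sharing index $J\sim\mathrm{Unif}[n]$ (and single-letter random variables $X:=X_J$, $Y:=Y_J$) then produces a single-letter bound of the form ${\Upsilon}_{q,\mathrm{LD}}^{(n)}(\alpha)\geq\theta_q(Q_X,Q_Y)$ for some pair $(Q_X,Q_Y)$ with $D(Q_X\|\pi_X)\geq\alpha$ (or $\leq\alpha$ for $q<1$). Splitting the blocklength $n$ into up to three sub-blocks and averaging the single-letter bounds on each sub-block produces the required lower convex (respectively upper concave) envelope on the right-hand side.

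The main obstacle is the converse. The single-function hypercontractivity inequality (Theorem~\ref{thm:hyper_single}) alone yields only the MD exponent~\eqref{eq:UpsilonMD}, which is too crude in the LD regime; the sharpening demands the two-function inequality and its variational form, handled carefully for $q\in(-\infty,1)\setminus\{0\}$ where $\|\cdot\|_q$ is not a norm and the inequality directions reverse. A secondary subtlety is showing that the envelope operation in~\eqref{eq:stab-FLD}--\eqref{eq:stab-RLD} cannot be dropped, since the single-letter $\theta_q$ need not be convex (or concave) in $\alpha$; this is precisely what makes the block-partitioning in the converse and the time-sharing in the achievability both genuinely necessary, and what forces us to match them to establish the asymptotic equality~\eqref{eqn:ld_balls_sphere}.
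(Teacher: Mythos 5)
Your achievability sketch is sound and matches the paper's in spirit: type classes (Hamming spheres) attain the single-point values $\varphi_q(s)$, and time-sharing over two (not three---for a scalar $\alpha$, Carath\'eodory on the graph in $\bbR\times\bbR$ needs at most $d+1=2$ points) type classes across sub-blocks yields the envelope. You also correctly identify the BL/hypercontractivity machinery as the key tool, and you are right that the envelope operations cannot in general be dropped.

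The converse, however, contains a genuine gap. You claim that tensorization plus a time-sharing index $J$ produces a single-letter bound of the form ${\Upsilon}_{q,\mathrm{LD}}^{(n)}(\alpha)\geq\theta_q(Q_X,Q_Y)$ for some pair with $D(Q_X\|\pi_X)\ge\alpha$, and that block-splitting then delivers the envelope. But notice that such a bound, before any splitting, would already give ${\Upsilon}_{q,\mathrm{LD}}^{(n)}(\alpha)\geq{\Upsilon}_{q,\mathrm{LD}}(\alpha)$, which is \emph{stronger} than the theorem and false for general finite-alphabet sources; the envelope is unavoidable precisely because the BL-type bound is linear in the marginal exponent $a$, not pointwise. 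Moreover, sub-block splitting cannot recover the envelope for the converse: the optimal $\calA$ attaining ${\Upsilon}_{q,\mathrm{LD}}^{(n)}(\alpha)$ is an arbitrary subset, not a product, so there is no decomposition of it into sub-block components to average over. Block-splitting is an achievability construction, not a converse technique.

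The correct converse mechanism---and the one the paper uses for Theorem~\ref{thm:strongsse-2} and tells you to adapt---is a subgradient (supporting-hyperplane) argument. Because $\underline{\Lambda}_{p,q'}(X^n;Y^n)=n\underline{\Lambda}_{p,q'}(X;Y)$ by exact tensorization (Lemma~\ref{lem:BLtensorization}), one has, for every valid H\"older parameter $p$ and any set $\calA$,
\begin{equation}
-\frac{1}{n}\log\big\|\pi_{X|Y}^n(\calA|Y^n)\big\|_q\;\ge\;\underline{\Lambda}_{p,q'}(X;Y)+\frac{a}{p},\qquad a:=-\frac{1}{n}\log\pi_X^n(\calA),
\end{equation}
where, by the information-theoretic characterization, $\underline{\Lambda}_{p,q'}=\inf_s\{\varphi_q(s)-s/p\}$. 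The right-hand side is a \emph{linear} function of $a$ with intercept given by a Legendre-type transform of $\varphi_q$. For each target point $a$, choose $1/p$ equal to a subgradient of $\cvx[\varphi_q]$ at $a$; then the linear bound touches $\cvx[\varphi_q]$ exactly at $a$, yielding the desired dimension-free inequality ${\Upsilon}_{q,\mathrm{LD}}^{(n)}(\alpha)\ge\cvx[{\Upsilon}_{q,\mathrm{LD}}](\alpha)$ after minimizing over $a\ge\alpha$ and using monotonicity. No time-sharing index $J$ is needed at all (that technique belongs to the Wyner-CI and channel-synthesis converses of Part~II), and no block-splitting occurs on the converse side.
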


It has been shown in \cite{yu2021convexity} that for $q\ge1$, ${\Upsilon}_{q,\mathrm{LD}}$
is convex, and for $q\in(-\infty,1)\backslash\{0\}$, ${\Upsilon}_{q,\mathrm{LD}}$
is concave. Combining this result with the strong $q$-stability theorem
(Theorem~\ref{thm:q-stability}) tells us that the lower convex envelope
and upper concave envelope operations in \eqref{eqn:ld_balls_sphere}
can be removed and Hamming balls or spheres are optimal in the LD
regime. That is, for the DSBS and $\alpha\in(0,1)$, 
\begin{align}
{\Upsilon}_{q,\mathrm{LD}}^{(\infty)}(\alpha)={\Upsilon}_{q,\mathrm{LD}}(\alpha)\quad\mbox{and}\quad{\Upsilon}_{q,\mathrm{LD}}^{(\infty)}(\alpha)={\Upsilon}_{q,\mathrm{LD}}(\alpha). \label{eq:Upsilon_q_LD}
\end{align}
This is parallel to the discussion of the resolution of the OPS conjecture in Section~\ref{sec:ldr}; also see \eqref{eq:opstrue}.
The asymptotically tight bound ${\Upsilon}_{q,\mathrm{LD}}$, defined in \eqref{eq:NICDoverTheta_q}, is  plotted in   Fig.~\ref{fig:q_stability_LD} for various $q$'s.

\begin{figure}
\centering 
 \includegraphics[width=0.75\columnwidth]{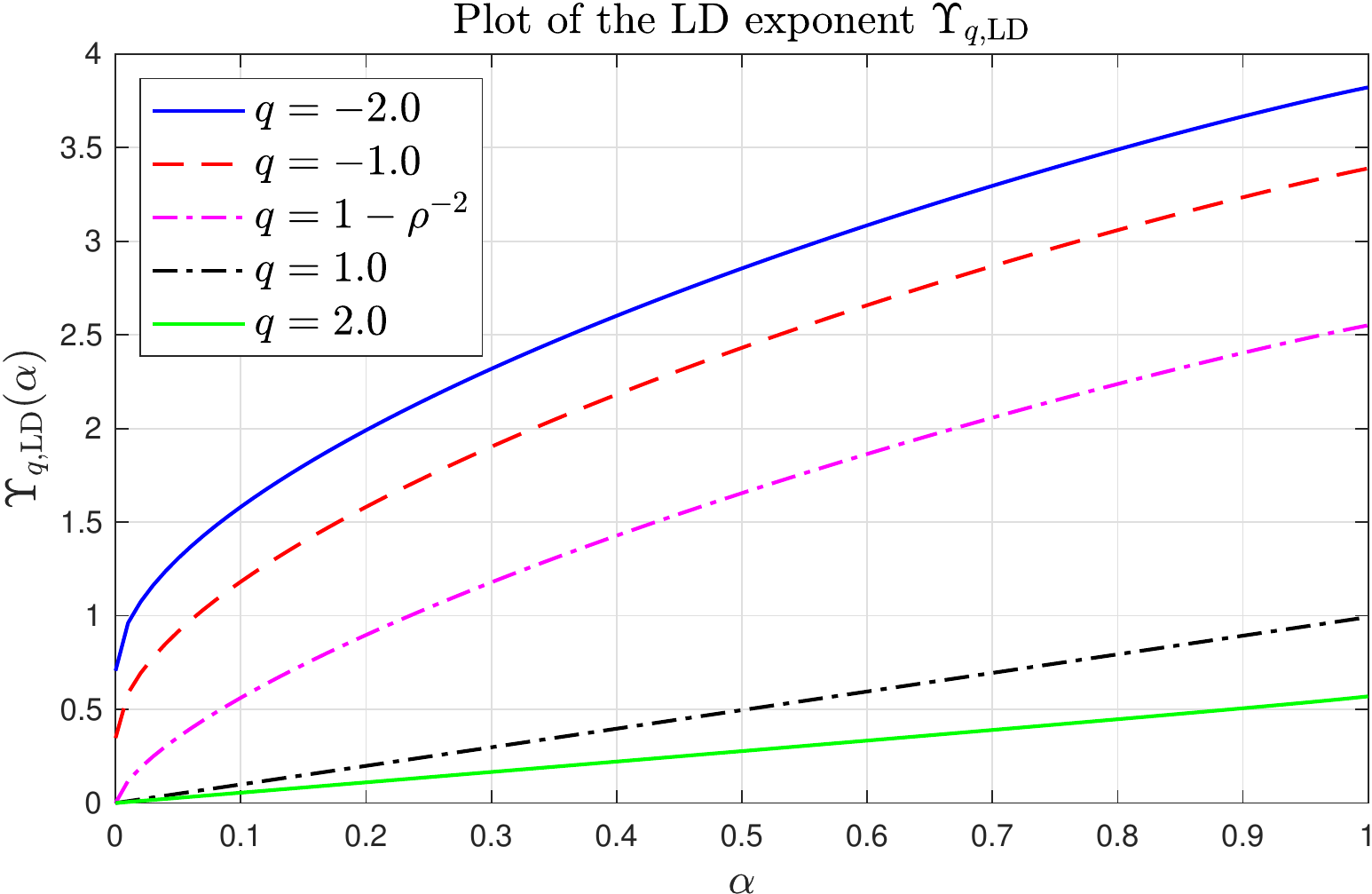}  
\caption{The LD  exponent of the $q$-stability  ${\Upsilon}_{q,\mathrm{LD}}$ for $\rho=0.9$. Observe that ${\Upsilon}_{q,\mathrm{LD}}$ appears to be (``only slightly'') convex for $q>1$, concave for $q \in (-\infty, 1)\setminus\{0\}$, and linear for $q=1$.  Also   ${\Upsilon}_{q,\mathrm{LD}}(0)$ vanishes when $q\le 1-\rho^{-2}\approx -0.2346$. }
\label{fig:q_stability_LD} 
\end{figure}
\section{Extensions to   Sources Beyond the DSBS}

\label{sec:stab_arb}


Similarly to the discussion in Section~\ref{sec:nicd_arb}, the $q$-stability and strong $q$-stability theorems can
be extended to sources defined on arbitrary finite alphabets as well as jointly Gaussian sources. We discuss these extensions here. 


\subsection{Finite Alphabets } \label{sec:finite_alpha}

Let $\pi_{XY}$ be a joint distribution defined on a finite alphabet.
We now consider its $q$-stability. For $\alpha\in[0,\alpha_{\max}(\pi_{X})]$ (defined in~\eqref{eqn:alpha_max}),
we reuse the definitions in \eqref{eq:NICD-FLD}--\eqref{eq:NICD-RMD}
for ${\Upsilon}_{q,\mathrm{MD}}^{(n)}$ 
and ${\Upsilon}_{q,\mathrm{LD}}^{(n)}$, but with the underlying
distribution set to be $\pi_{XY}$.   The strong $q$-stability theorem (Theorem \ref{thm:strong-q-stability})
can be extended to the following general version, which was first shown
in~\cite{yu2021strong}, as a consequence of the strong version of
hypercontractivity inequalities derived in~\cite{yu2021strong}. We   provide a simple proof of Theorem~\ref{thm:strongqstability}  in Section \ref{sec:NICD-Stability}. 


\begin{theorem}[Strong $q$-stability: General version] \label{thm:strongqstability}
For any $n\ge1$ and $\alpha\in(0,\alpha_{\max}(\pi_{X})]$, \eqref{eq:stab-FLD}
holds for $q\ge1$, and \eqref{eq:stab-RLD} holds for $q\in(-\infty,1)\setminus\{0\}$.
Moreover, \eqref{eq:stab-FLD} and \eqref{eq:stab-RLD} are asymptotically
tight, i.e.,~\eqref{eqn:ld_balls_sphere} holds.  \end{theorem}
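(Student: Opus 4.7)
The plan is to mirror the DSBS argument that yields Theorem \ref{thm:q-stability}, replacing each binary-specific ingredient by its general-alphabet analogue. On the achievability side (asymptotic tightness in \eqref{eqn:ld_balls_sphere} for arbitrary finite $\mathcal{X}\times\mathcal{Y}$), I would take $\calA_n$ to be a union of type classes associated with a fixed type $Q_X\in\calP_n(\calX)$ and invoke Sanov's theorem (Theorem \ref{thm:sanov}) twice: once to obtain $\pi_X^n(\calA_n)\doteq 2^{-nD(Q_X\|\pi_X)}$, and once to estimate, for each $y^n$ of type $Q_Y$, the conditional probability $\pi_{X|Y}^n(\calA_n|y^n)\doteq 2^{-n[\bar{D}(Q_X,Q_Y\|\pi_{XY})-D(Q_Y\|\pi_Y)]}$, with $\bar{D}$ the minimum relative entropy in \eqref{eqn:min_rel_ent}. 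Grouping the expectation defining $\Vert\pi_{X|Y}^n(\calA_n|Y^n)\Vert_q$ by output type $Q_Y$ and extracting the dominating exponent yields a term of the form $\bar{D}(Q_X,Q_Y\|\pi_{XY})-D(Q_Y\|\pi_Y)/q'$, which is exactly $\theta_q(Q_X,Q_Y)$ in \eqref{eq:NICDtheta_q}. Optimizing over $Q_Y$ and then $Q_X$ reproduces \eqref{eq:NICDunderTheta_q}--\eqref{eq:NICDoverTheta_q}. Finally, a Carath\'eodory time-sharing argument over at most three choices of $Q_X$ (since the relevant epigraph/hypograph lives in the two-dimensional $(\alpha,\Upsilon_{q,\mathrm{LD}})$-plane) upgrades the pointwise bound to the envelope in \eqref{eqn:ld_balls_sphere}.

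For the converse, i.e.\ the single-letter inequalities \eqref{eq:stab-FLD} and \eqref{eq:stab-RLD}, the plan is to use an information-theoretic (Legendre-dual) form of the hypercontractivity inequalities \eqref{eq:NICDFHC-1} and \eqref{eq:NICDRHC-1}. The direct substitution $f=\mathbf{1}_\calA$ into the classical single-exponent inequality only delivers the linear MD bound $\Upsilon_{q,\mathrm{MD}}(\alpha)=\alpha/(1+(q-1)\rho^2)$ (Theorem \ref{thm:strong-q-stability}), which is tangent to $\Upsilon_{q,\mathrm{LD}}$ at a single point. To recover the full curve, I will tilt the inequality by an arbitrary auxiliary distribution and optimize: concretely, pass from the functional inequality $\|T_\rho f\|_q\le\|f\|_p$ to its dual divergence inequality along the channel $\pi_{Y|X}^n$, apply it with $f=\mathbf{1}_\calA$, and take the supremum/infimum over the admissible $(Q_X,Q_Y)$ satisfying the marginal constraints. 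The resulting bound on $\Upsilon_{q,\mathrm{LD}}^{(n)}(\alpha)$ is already single-letter (and thus dimension-independent) because the dual hypercontractivity inequality tensorizes; combining this bound at each~$n$ with a convex-combination argument over feasible $(Q_X,Q_Y)$ pairs yields the lower convex and upper concave envelopes of $\Upsilon_{q,\mathrm{LD}}$ in \eqref{eq:stab-FLD}--\eqref{eq:stab-RLD}.

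The hard part will be setting up the dual hypercontractivity inequality in a general-alphabet, $q$-parameterized form whose substitution $f=\mathbf{1}_\calA$ reproduces $\theta_q(Q_X,Q_Y)$ exactly. The Sibson-style term $D(Q_Y\|\pi_Y)/q'$ with $q'=q/(q-1)$ must be read off from the Legendre duality between the $L^p$ and $L^q$ sides of \eqref{eq:NICDFHC-1}, and the three sign regimes $q\ge 1$, $0<q<1$, and $q<0$ behave differently: for $q\ge 1$ we use the forward region $\mathcal{R}_{\mathrm{FH}}(\rho)$ and obtain an $\inf_{Q_Y}$, while for $q\in(-\infty,1)\setminus\{0\}$ we use $\mathcal{R}_{\mathrm{RH}}(\rho)$ and must switch to a $\sup_{Q_Y}$ (hence the different formulas in \eqref{eq:NICDunderTheta_q} versus \eqref{eq:NICDoverTheta_q}). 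A further delicate point is the degenerate range $q\in(-\infty,1-\rho^{-2}]$, where the reverse hypercontractivity region fails and one expects no nontrivial bound; this matches the divergence $\Upsilon_{q,\mathrm{MD}}^{(\infty)}(\alpha)=\infty$ noted after \eqref{eq:Exp_MD} and must be carved out of the envelope argument.
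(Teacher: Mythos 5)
Your proposal is essentially the paper's own route: the paper proves this theorem by the phrase ``following steps similar to the proof for the strong SSE theorem'' (Theorem~\ref{thm:strongsse-2}), i.e., substituting indicators into the single-function form of the information-theoretic BL/hypercontractivity inequality, using the Legendre duality between $(p,q)$ and the relative-entropy constraints to recover $\theta_q$ and hence the envelope, and then invoking Sanov's theorem and time-sharing of type classes for asymptotic tightness. Two small inaccuracies worth noting: since the constraint here is univariate in $\alpha$ (unlike the two-constraint SSE setting where $d=2$), Carath\'eodory requires time-sharing over at most two, not three, type classes; and the mechanism that produces the convex/concave envelope in the converse is not a ``convex-combination argument over $(Q_X,Q_Y)$ pairs'' but rather an optimization over the H\"older exponent $p$ (equivalently, choosing $1/p$ to be a subgradient of $\mathbb{L}[\varphi_q]$ at the realized $a=-\frac{1}{n}\log\pi_X^n(\calA_n)$), which is how the affine lower bounds coming from each admissible $(p,q)$ are stitched into the envelope.
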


The $q$-stability theorem (Theorem \ref{thm:q-stability}) can be
also generalized to the finite alphabet case, but for general sources on finite alphabets, a limiting operation is needed. 

\begin{theorem}[$q$-Stability: General version] \label{thm:sse-1-1}
For any $n\ge1$, $\alpha > 0$, and $q\ge 1$, 
\begin{align}
{\Upsilon}_{q,\mathrm{MD}}^{(n)}(\alpha) & \ge\lim_{\epsilon\downarrow0}\frac{1}{\epsilon}\cvx[{\Upsilon}_{q,\mathrm{LD}}](\epsilon\alpha).\label{eq:NICDMD-1}
\end{align}
If instead $q \in (-\infty, 1)\setminus\{0\}$ and ${\Upsilon}_{q,\mathrm{LD}}(0)=0$, then 
\begin{align}
{\Upsilon}_{q,\mathrm{MD}}^{(n)}(\alpha) & \leq\lim_{\epsilon\downarrow0}\frac{1}{\epsilon}\cve[{\Upsilon}_{q,\mathrm{LD}}](\epsilon\alpha) . \label{eq:NICDMD2-1}
\end{align}
Moreover, the inequalities in \eqref{eq:NICDMD-1}--\eqref{eq:NICDMD2-1}
are asymptotically tight. 
\end{theorem}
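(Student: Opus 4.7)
The key observation is that the MD and LD definitions differ only through the normalization factor. For any set $\calA \subset \calX^n$,
\begin{align}
-\frac{1}{\theta_n} \log \Vert \pi_{X|Y}^n(\calA|Y^n)\Vert_q = \frac{n}{\theta_n}\cdot \bigg( -\frac{1}{n} \log \Vert \pi_{X|Y}^n(\calA|Y^n)\Vert_q \bigg),
\end{align}
while the constraint $\pi_X^n(\calA) \le 2^{-\theta_n \alpha}$ coincides with $\pi_X^n(\calA) \le 2^{-n \epsilon_n \alpha}$ upon defining $\epsilon_n := \theta_n/n$, which satisfies $\epsilon_n \to 0$ by the MD assumption on $\{\theta_n\}$. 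Matching the $\max$ (or $\min$) over $\calA$ on both sides yields the exact identity
\begin{align}
{\Upsilon}_{q,\mathrm{MD}}^{(n)}(\alpha) \;=\; \frac{1}{\epsilon_n}\, {\Upsilon}_{q,\mathrm{LD}}^{(n)}(\epsilon_n \alpha), \label{eq:plan-key}
\end{align}
valid for every $n \ge 1$ and every admissible $(q,\alpha)$. This reduces the MD problem to the LD problem evaluated at the vanishing rate $\epsilon_n\alpha$.

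\textbf{Forward and Reverse Inequalities.} For $q \ge 1$, I would invoke Theorem~\ref{thm:strongqstability} to deduce ${\Upsilon}_{q,\mathrm{LD}}^{(n)}(\epsilon_n \alpha) \ge \cvx[{\Upsilon}_{q,\mathrm{LD}}](\epsilon_n \alpha)$. The key elementary observation is that for any convex function $h:[0,\infty)\to\bbR$ with $h(0) \le 0$, the ratio $\epsilon \mapsto h(\epsilon\alpha)/\epsilon$ is nondecreasing on $(0,\infty)$; applying this to $h = \cvx[{\Upsilon}_{q,\mathrm{LD}}]$ (which vanishes at $0$ automatically for $q \ge 1$, since $\theta_q \ge 0$ with equality at $(\pi_X,\pi_Y)$) and combining with \eqref{eq:plan-key} yields
\begin{align}
{\Upsilon}_{q,\mathrm{MD}}^{(n)}(\alpha) \;\ge\; \frac{1}{\epsilon_n} \cvx[{\Upsilon}_{q,\mathrm{LD}}](\epsilon_n \alpha) \;\ge\; \lim_{\epsilon \downarrow 0} \frac{1}{\epsilon}\cvx[{\Upsilon}_{q,\mathrm{LD}}](\epsilon \alpha),
\end{align}
which is \eqref{eq:NICDMD-1}. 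For $q \in (-\infty,1)\setminus\{0\}$, the mirror argument works: Theorem~\ref{thm:strongqstability} gives ${\Upsilon}_{q,\mathrm{LD}}^{(n)}(\epsilon_n \alpha) \le \cve[{\Upsilon}_{q,\mathrm{LD}}](\epsilon_n \alpha)$, and the function $\cve[{\Upsilon}_{q,\mathrm{LD}}]$ is concave and, by the hypothesis ${\Upsilon}_{q,\mathrm{LD}}(0) = 0$, vanishes at the origin; hence $h(\epsilon\alpha)/\epsilon$ is nonincreasing and the chain of inequalities reverses, yielding \eqref{eq:NICDMD2-1}.

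\textbf{Asymptotic Tightness and Main Obstacle.} For asymptotic tightness, the plan is to leverage the matching achievability in the second half of Theorem~\ref{thm:strongqstability}, which asserts that the LD bound is attained in the limit by unions of type classes. The principal technical obstacle is a \emph{uniformity issue}: identity~\eqref{eq:plan-key} forces us to evaluate ${\Upsilon}_{q,\mathrm{LD}}^{(n)}$ at the variable, vanishing rate $\epsilon_n\alpha$, whereas the asymptotic tightness in Theorem~\ref{thm:strongqstability} is stated pointwise at each \emph{fixed} positive rate. I plan to circumvent this by first fixing an arbitrarily small $\epsilon^* > 0$ so that $\frac{1}{\epsilon^*}\cvx[{\Upsilon}_{q,\mathrm{LD}}](\epsilon^*\alpha)$ approximates its $\epsilon \downarrow 0$ limit to within any prescribed tolerance, then invoking the achievability side of Theorem~\ref{thm:strongqstability} at this fixed rate to construct a Carath\'eodory-style time-shared mixture of type classes on a reference block achieving $\cvx[{\Upsilon}_{q,\mathrm{LD}}](\epsilon^*\alpha)$, and finally tiling or diluting this construction within a length-$n$ block so that the resulting set meets the tighter MD constraint $\pi_X^n(\calA) \le 2^{-\theta_n\alpha}$. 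A Taylor expansion in the spirit of the $\hat{\chi}^2$-divergence passage from LD to MD exponents---mirroring Proposition~\ref{prop:MDsphere}---should then complete the matching asymptotics.
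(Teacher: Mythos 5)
Your proposal is correct, and it follows the route the paper implicitly intends (the paper does not supply a written proof of this theorem but derives it from Theorem~\ref{thm:strongqstability} via the LD-to-MD reduction discussed around \eqref{eq:NICD-13a} and \eqref{eq:NICD-13}). The identity ${\Upsilon}_{q,\mathrm{MD}}^{(n)}(\alpha) = \frac{1}{\epsilon_n}{\Upsilon}_{q,\mathrm{LD}}^{(n)}(\epsilon_n\alpha)$ with $\epsilon_n = \theta_n/n$ is the right pivot, and the secant-slope monotonicity argument (for convex $h$ with $h(0)\le 0$, $\epsilon\mapsto h(\epsilon\alpha)/\epsilon$ is nondecreasing; dually for concave $h$ with $h(0)\ge 0$) is exactly what turns the dimension-free LD bound of Theorem~\ref{thm:strongqstability} into a dimension-free MD bound.

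Two remarks. First, the ``principal technical obstacle'' you identify for tightness is milder than your write-up suggests: once $\epsilon^*>0$ is fixed, one simply takes a sequence $\{\calA_m\}$ that asymptotically attains the LD bound at rate $\epsilon^*\alpha$, sets $m(n) \approx \theta_n/\epsilon^*$ (so $m(n)\to\infty$ while $m(n)/n\to 0$), and ``dilutes'' via $\hat\calA_n := \calA_{m(n)}\times\calX^{n-m(n)}$. Then $\pi_X^n(\hat\calA_n) = \pi_X^{m(n)}(\calA_{m(n)})$ and $\Vert\pi_{X|Y}^n(\hat\calA_n|Y^n)\Vert_q = \Vert\pi_{X|Y}^{m(n)}(\calA_{m(n)}|Y^{m(n)})\Vert_q$, so the MD exponent of $\hat\calA_n$ converges to $\frac{1}{\epsilon^*}\cvx[{\Upsilon}_{q,\mathrm{LD}}](\epsilon^*\alpha)$; taking $\epsilon^*\downarrow 0$ gives the matching upper bound on $\limsup_n{\Upsilon}_{q,\mathrm{MD}}^{(n)}(\alpha)$, and the already-proven one-sided inequality handles the $\liminf$. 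No Taylor expansion in the $\hat\chi^2$ sense is needed; that machinery is only required to get the explicit closed-form MD exponent for specific sources like the DSBS. Second, a small edge case worth flagging for completeness: for small $n$ one may have $\epsilon_n\alpha > \alpha_{\max}(\pi_X)$, outside the range where Theorem~\ref{thm:strongqstability} applies; but then no nonempty $\calA$ satisfies the MD constraint, so ${\Upsilon}_{q,\mathrm{MD}}^{(n)}(\alpha)=+\infty$ and \eqref{eq:NICDMD-1} holds trivially.
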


\subsection{Gaussian Sources}\label{sec:gauss_stab}
Finally, we turn our attention to 
 memoryless bivariate Gaussian sources with
correlation coefficient $\rho\in(0,1)$. For this class of sources, the max
$q$-stability problem was completely solved by \citet{borell1985geometric}
for all $a\in[0,1]$.  Let
$\pi_{XY}$ be a bivariate Gaussian distribution with zero mean
and covariance matrix $\bK$ given in \eqref{eqn:cov_mat_rho}.   Let $(X^{n},Y^{n})\sim \pi_{XY}^{n}$. For this distribution, a real number $q > 1$,
and $a\in[0,1]$, we define 
\begin{align}
\Gamma_{\rho}^{(q)}(a) & :=\sup\Vert \pi_{X|Y}^{n}(\calA|Y^n)\Vert_{q}^{q},
\end{align}
where the supremum runs over all measurable sets  $\calA\subset\bbR^{n}$
such that $\pi_{X}^{n}(\calA)=a$.  The following theorem is  due to \citet{borell1985geometric}.

\begin{theorem}[Borell's $q$-stability theorem]  \label{thm:borell}
 For any $n\ge1$, $q>1$, $0\le\rho<1$, and $a\in[0,1]$, one
has 
\begin{equation}
\Gamma_{\rho}^{(q)}(a)=\Lambda_{\rho}^{(q)}(a), \label{eq:borell}
\end{equation}
where $\Lambda_{\rho}^{(q)}$, the Gaussian $q$-stability function, is defined in \eqref{eq:GaussianStabFunc}. 
Moreover,  optimal  subsets $\calA$ (i.e., those  attaining $\Gamma_{\rho}^{(q)}$)
are parallel halfspaces. \end{theorem}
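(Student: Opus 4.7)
The plan is to establish Borell's $q$-stability theorem by adapting the Mossel--Neeman semigroup interpolation technique that was sketched for Theorem~\ref{thm:functional-borell} earlier. I would first recast the set-level statement as a functional inequality: for every measurable $f:\bbR^{n}\to[0,1]$,
\begin{equation}
\bbE\big[(T_{\rho}f(Y^{n}))^{q}\big]\le\Lambda_{\rho}^{(q)}\big(\bbE[f(X^{n})]\big),
\end{equation}
where $(X^{n},Y^{n})\sim\pi_{XY}^{n}$. Taking $f=\bone_{\calA}$ recovers~\eqref{eq:borell}, and keeping track of the equality cases along the interpolation will pin down halfspaces as the unique extremisers. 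This reduction, along with an observation that $\Lambda_{\rho}^{(q)}(a)$ is exactly the $q$-stability of a $1$-dimensional halfspace of Gaussian measure $a$, places the problem in the same shape as Theorem~\ref{thm:functional-borell}.

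Next, I would introduce the Ornstein--Uhlenbeck semigroup $P_{t}$ via $(P_{t}f)(x^{n}):=\int f(e^{-t}x^{n}+\sqrt{1-e^{-2t}}\,y^{n})\,\rmd\pi_{Y}^{n}(y^{n})$ and set $f_{t}:=P_{t}f$, $g_{t}:=T_{\rho}P_{t}f=P_{t}T_{\rho}f$ (using that $P_{t}$ and $T_{\rho}$ commute since both are Gaussian convolutions). Then $(f_{0},g_{0})=(f,T_{\rho}f)$ and, as $t\to\infty$, $(f_{t},g_{t})\to(\bbE[f],\bbE[f])$ pointwise. I would build an interpolating functional
\begin{equation}
R_{t}:=\bbE\big[\Psi_{\rho,q}(f_{t}(X^{n}),\,g_{t}(Y^{n}))\big]
\end{equation}
for a carefully engineered $\Psi_{\rho,q}:[0,1]^{2}\to\bbR$ whose boundary behaviour arranges $R_{0}=\bbE[(T_{\rho}f)^{q}]-\Lambda_{\rho}^{(q)}(\bbE[f])$ and $\lim_{t\to\infty}R_{t}=0$. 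Concretely, $\Psi_{\rho,q}$ should be obtained by demanding that it is preserved under the one-dimensional halfspace flow (the extremisers), so that halfspaces make the derivative $\frac{\rmd R_{t}}{\rmd t}$ vanish identically. The theorem then reduces to showing $\frac{\rmd R_{t}}{\rmd t}\le 0$ for all $t>0$.

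The derivative computation uses the generator $L=\Delta-x\cdot\nabla$ of $P_{t}$, together with the substitution $u_{t}:=\Phi^{-1}(f_{t})$, $v_{t}:=\Phi^{-1}(g_{t})$, and integration-by-parts, to rewrite $\frac{\rmd R_{t}}{\rmd t}$ as the expectation of a quadratic form
\begin{equation}
\big\langle \grad u_{t},\;\bM_{\rho,q}(u_{t},v_{t})\,\grad u_{t}\big\rangle+\big\langle \grad v_{t},\;\bN_{\rho,q}(u_{t},v_{t})\,\grad v_{t}\big\rangle+\text{cross terms},
\end{equation}
against a symmetric matrix kernel built from second derivatives of $\Psi_{\rho,q}$ weighted by Gaussian densities. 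The sign of the derivative then reduces to verifying a pointwise $2\times 2$ semidefiniteness inequality coming from $\Lambda_{\rho}^{(q)}$ and its derivatives.

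The main obstacle, as in the Mossel--Neeman proof of Theorem~\ref{thm:functional-borell}, is the third step: exhibiting the correct $\Psi_{\rho,q}$ and then verifying the pointwise semidefiniteness of the resulting matrix kernel. For $q=2$ the relevant function is essentially the bivariate normal copula and the computation (Lemma~\ref{lem:diff-R_t}) is clean; for general $q>1$ the function $\Lambda_{\rho}^{(q)}$ has no simple closed form, and its partial derivatives must be manipulated through the Gaussian integral representation~\eqref{eq:GaussianStabFunc}, which makes the algebra considerably more delicate. A complementary (and historically the original) route is Ehrhard symmetrization: one shows directly that replacing $\calA$ by its halfspace rearrangement of the same Gaussian measure does not decrease $\|T_{\rho}\bone_{\calA}\|_{q}$, trading the analytic bookkeeping above for a geometric Gaussian rearrangement inequality. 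Either path suffices to conclude~\eqref{eq:borell} together with the optimality of parallel halfspaces.
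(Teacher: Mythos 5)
The paper states Theorem~\ref{thm:borell} without proof, citing Borell (1985) and Eldan (2015) for general real $q>1$, and Isaksson--Mossel (2012) and Neeman (2013) for integer $q$ only. Your recasting of the set-level statement as the functional inequality $\bbE\big[(T_\rho f)^q\big]\le\Lambda_\rho^{(q)}(\bbE[f])$ is correct, and so is your closing observation that Ehrhard symmetrization is the original route and suffices for all real $q>1$. But the semigroup-interpolation part of your proposal has a structural gap, and it is exactly the gap that explains why the paper flags the Isaksson--Mossel/Neeman argument as working only for integer $q$.

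Your interpolating functional pairs the two random vectors $X^n$ and $Y^n$, setting $f_t=P_t f$, $g_t=T_\rho P_t f$, and $R_t=\bbE[\Psi_{\rho,q}(f_t(X^n),g_t(Y^n))]$. This is not the pairing that makes the Mossel--Neeman computation go through. In the known semigroup proof of Borell's $q$-stability theorem for integer $q=k$, one introduces $k$ vectors $X_1^n,\dots,X_k^n$ that are conditionally i.i.d.\ given $Y^n$ (so each pair $(X_i^n,X_j^n)$, $i\ne j$, is $\rho^2$-correlated), picks a $k$-argument function $J$ (the joint Gaussian quadrant probability at the appropriate correlation), and sets $R_t=\bbE[J(P_t f(X_1^n),\dots,P_t f(X_k^n))]$. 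Because $J$ at a $\{0,1\}^k$ vertex is the indicator of "all ones," $R_0=\Pr(f(X_1^n)=\cdots=f(X_k^n)=1)=\bS_\rho^{(k)}[f]$ exactly, and $R_\infty=J(a,\dots,a)=\Lambda_\rho^{(k)}(a)$. This only produces a pointwise $J$ because $q=k$ is an integer, so that $(T_\rho f)^k$ can be written as a $k$-fold product of conditional expectations; there is no analogous "$q$-fold tensor factor" when $q$ is not an integer, which is precisely why the paper attributes the semigroup proof only to the integer case. Your two-argument $\Psi(f(X^n),T_\rho f(Y^n))$ does not reproduce $\bbE[(T_\rho f)^q]$ at $t=0$ for any nondegenerate $\Psi$: since $f(X^n)\in\{0,1\}$ for an indicator and $T_\rho f(Y^n)=w$ has a continuous distribution, $\bbE[\Psi(f,T_\rho f)]=\bbE[w\Psi(1,w)+(1-w)\Psi(0,w)]$; forcing this to equal $\bbE[w^q]$ (up to a constant) for all $f$ pins $\Psi$ to depend on its second argument only in a degenerate way and removes the freedom you need to make the derivative $\frac{\rmd R_t}{\rmd t}$ have a sign. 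So the specific $2\times2$ semidefiniteness reduction you describe cannot be reached from this setup. What actually proves the theorem for all real $q>1$ is Borell's Ehrhard-symmetrization argument or Eldan's stochastic-calculus proof, which you correctly mention but should not present as interchangeable with the interpolation route for non-integer $q$.
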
 

 The proof of this theorem can be found in \citet{borell1985geometric} and \citet{eldan2015two}.
Moreover, the proof of this theorem with $q$ being an integer can
also be found in \citet{isaksson2012maximally} and \citet{neeman2013isoperimetry}.
We remark that  Neeman's proof in \cite{neeman2013isoperimetry} is
 an extension of the one for   Borell's isoperimetric theorem
given in Section \ref{subsec:NICD-Gaussian} to the multi-user case.

We now consider the Gaussian version of  the Courtade--Kumar conjecture. 
Substituting~\eqref{eq:borell} into the $\Phi$-symmetric max $q$-stability in~\eqref{eq:-4} and taking   limits as $q \downarrow 1$, one can deduce that $\breve{\Pi}_{\rho}^{(1)}(a)$ is attained by halfspaces with $\pi_X^n$-probability $a$. This implies that 
\begin{align}
&\hspace{-.25in}\max_{\substack{f:\bbR^n\to \{0,1\} \textrm{ measurable}: \\ \mathbb{E}[f(X^{n})]=a}}-H\big(f(X^{n})\big|Y^{n}\big)\nn\\*
&\hspace{-.25in} =-H\big(\bone\{ X_1\!\le\!\Phi^{-1}(a)\} \big| Y_1\big) =- \bbE_{Y_1} \Bigg[ h \bigg(\Phi\Big(\frac{\Phi^{-1}(a)\!-\!\rho Y_1}{\sqrt{1-\rho^2}}\Big)\bigg)\Bigg]. \label{eq:GaussianCK}
\end{align}
That is, given $a\in [0,1]$, the indicator of any half-space  with $\pi_X^n$-probability $a$ (e.g., $(-\infty,\Phi^{-1}(a)]\times \bbR^{n-1}$)  maximizes the mutual information between $f(X^n)$ and $Y^n$  over all $\{0,1\}$-valued measurable functions $f$. This statement was also proved by \citet{kindler2015remarks} using a different method. If we do not fix $a$, then similarly to the original  Courtade--Kumar conjecture for the DSBS, it is  natural to conjecture that for this Gaussian version of  Courtade--Kumar conjecture, the mutual information is also  maximized at $a=1/2$ for every $\rho \in (0,1)$. This point can be confirmed numerically  as shown in Fig.~\ref{fig:GaussianCK} in which we plot the right-hand side of \eqref{eq:GaussianCK} plus $h(a)$ as a function of~$a \in [0,1/2]$ for different $\rho$'s.  Note that we only focus on the case $a\in [0,1/2]$ in Fig.~\ref{fig:GaussianCK}, since the function considered is symmetric with respect to $a=1/2$.  It is easily seen that the maxima of these curves occur  at $a=1/2$.  

\begin{figure}[!ht]
\centering \includegraphics[width=0.8\columnwidth]{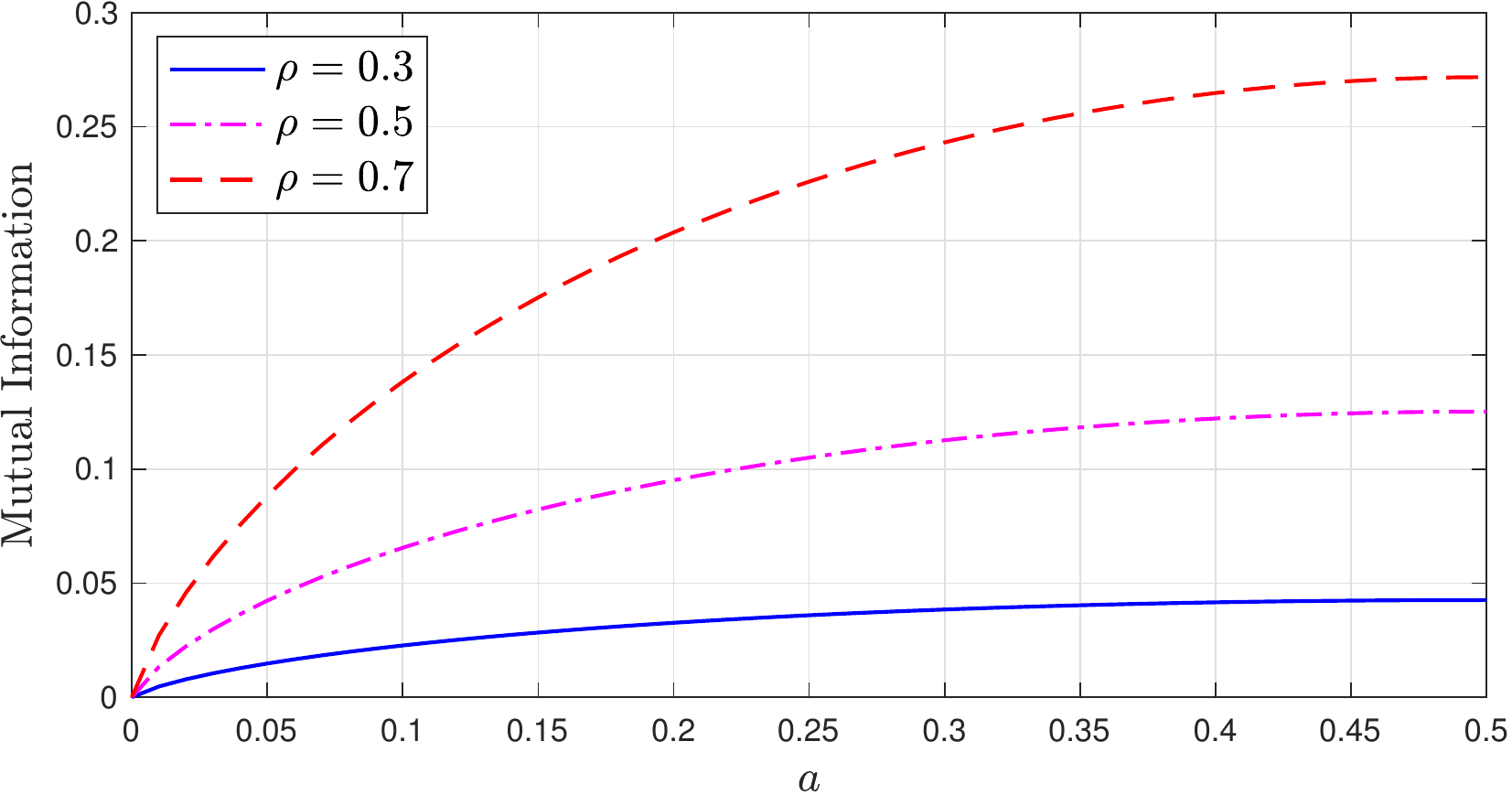} 
\caption{The mutual information, i.e., the right-hand side of~\eqref{eq:GaussianCK} plus $h(a)$.}
\label{fig:GaussianCK} 
\end{figure}




\global\long\def\Ent{\mathrm{Ent}}
\global\long\def\sign{\mathrm{sign}}

\chapter{Functional Inequalities} \label{ch:funineq}

\looseness=+1
In this section, we consider functional extensions of the NICD and
the max $q$-stability problems as described in Sections~\ref{ch:NICD}
and \ref{ch:Stability} respectively. Recall that in the $2$-user NICD problem,
we optimize the probability of  agreement between two random bits that are generated
in a distributed manner via {\em Boolean functions} from a joint source
$(X^{n},Y^{n})$. In this section, we replace the Boolean functions $f,g\in\{0,1\}^n\to\{0,1\}$ 
with {\em arbitrary nonnegative functions}, and obtain corresponding
functional inequalities. Specifically, we will introduce the Brascamp--Lieb
inequalities, the hypercontractivity inequalities, and the log-Sobolev
inequalities, as well as their strengthened counterparts. We provide
information-theoretic characterizations of these inequalities, and
also use them to prove the strong SSE theorem and the strong
$q$-stability theorem  stated respectively in Sections \ref{sec:nicd_arb} and \ref{sec:stab_arb}. 
Analogously to the forward and reverse joint probabilities in the NICD problem (Definition~\ref{def:for_rev_jp}), the optimal
constants or exponents in these inequalities can be also regarded
as refinements of GKW's common information when the latter is equal
to zero, but with the ``information'' measured by the {\em entropy}
of a nonnegative function, rather than the Shannon entropy. 

This section concerning {\em functional inequalities} (or inequalities
involving functionals) starts by formally defining some convenient quantities, such as the minimum relative entropy region, 
in Section~\ref{sec:def_BL}. Using these new definitions, we provide alternative representations of the forward and reverse large deviations exponents in the NICD and $q$-stability problems.  These quantities are then used in Section~\ref{sec:classic_BL} 
to express the hypercontractivity regions (which generalize and strengthen the classic H\"older inequalities) and Brascamp--Lieb exponents
  in terms of single-letter, information-theoretic quantities. We then connect these exponents to the NICD and $q$-stability problems in Section~\ref{sec:NICD-Stability}, leading to a  short proof of the strong SSE theorem (Theorem~\ref{thm:strongsse-2}).  In Section~\ref{sec:log_sobolev}, we discuss the log-Sobolev inequalities, provide single-letter expressions for their optimal constants, and use the results as a bridge to connect the hypercontractivity inequalities to their strengthened counterparts, which are presented in Section~\ref{sec:stronger}. 
In Section~\ref{sec:stronger}, our discussion culminates with  expressions for the strong log-Sobolev
constant and a strengthened hypercontractivity inequality for the DSBS.   Throughout
this section, we focus on {\em information-theoretic} characterizations of optimal constants and exponents in various functional 
inequalities.

As there are several interconnected results in this section and Sections \ref{ch:NICD} and \ref{ch:Stability}, we illustrate their relationships by means of a graph in Fig.~\ref{fig:FIstructure}.

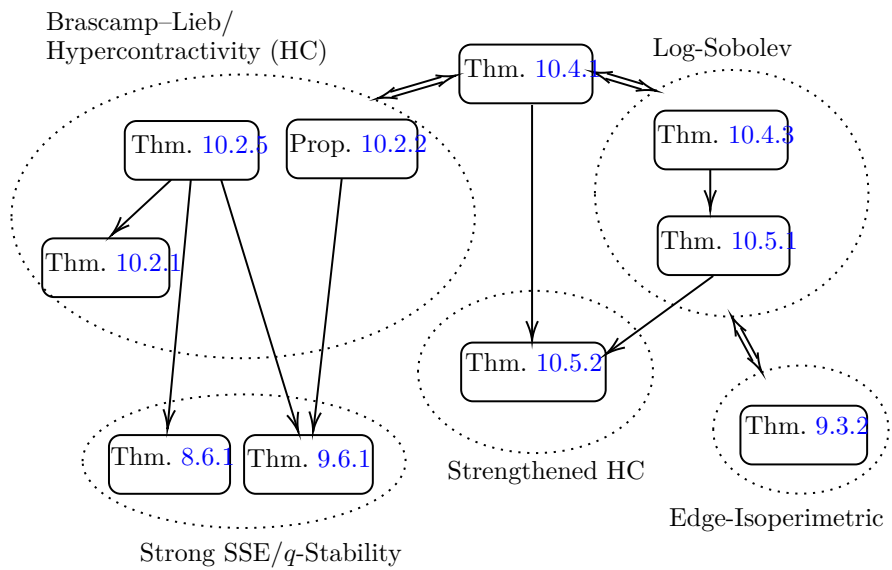
\begin{figure}[!ht]
\centering 

\tikzset{every picture/.style={line width=0.75pt}} 

\begin{tikzpicture}[x=0.75pt,y=0.75pt,yscale=-1,xscale=1,scale=0.9, every node/.style={scale=0.9}] 

\draw   (870.22,3063.93) .. controls (870.22,3060.29) and (873.17,3057.35) .. (876.81,3057.35) -- (934.91,3057.35) .. controls (938.55,3057.35) and (941.5,3060.29) .. (941.5,3063.93) -- (941.5,3083.69) .. controls (941.5,3087.33) and (938.55,3090.28) .. (934.91,3090.28) -- (876.81,3090.28) .. controls (873.17,3090.28) and (870.22,3087.33) .. (870.22,3083.69) -- cycle ;
\draw   (916.5,2998.13) .. controls (916.5,2994.49) and (919.45,2991.54) .. (923.09,2991.54) -- (984.91,2991.54) .. controls (988.55,2991.54) and (991.5,2994.49) .. (991.5,2998.13) -- (991.5,3017.89) .. controls (991.5,3021.53) and (988.55,3024.47) .. (984.91,3024.47) -- (923.09,3024.47) .. controls (919.45,3024.47) and (916.5,3021.53) .. (916.5,3017.89) -- cycle ;
\draw   (1213.34,2992.31) .. controls (1213.34,2988.67) and (1216.29,2985.72) .. (1219.92,2985.72) -- (1281.41,2985.72) .. controls (1285.05,2985.72) and (1288,2988.67) .. (1288,2992.31) -- (1288,3012.06) .. controls (1288,3015.7) and (1285.05,3018.65) .. (1281.41,3018.65) -- (1219.92,3018.65) .. controls (1216.29,3018.65) and (1213.34,3015.7) .. (1213.34,3012.06) -- cycle ;
\draw   (907.64,3174.25) .. controls (907.64,3170.61) and (910.59,3167.67) .. (914.22,3167.67) -- (968.91,3167.67) .. controls (972.55,3167.67) and (975.5,3170.61) .. (975.5,3174.25) -- (975.5,3194.01) .. controls (975.5,3197.65) and (972.55,3200.6) .. (968.91,3200.6) -- (914.22,3200.6) .. controls (910.59,3200.6) and (907.64,3197.65) .. (907.64,3194.01) -- cycle ;
\draw   (1104,2955.26) .. controls (1104,2951.62) and (1106.95,2948.67) .. (1110.59,2948.67) -- (1172.01,2948.67) .. controls (1175.65,2948.67) and (1178.6,2951.62) .. (1178.6,2955.26) -- (1178.6,2975.02) .. controls (1178.6,2978.65) and (1175.65,2981.6) .. (1172.01,2981.6) -- (1110.59,2981.6) .. controls (1106.95,2981.6) and (1104,2978.65) .. (1104,2975.02) -- cycle ;
\draw   (983.36,3174.25) .. controls (983.36,3170.61) and (986.31,3167.67) .. (989.94,3167.67) -- (1048.91,3167.67) .. controls (1052.55,3167.67) and (1055.5,3170.61) .. (1055.5,3174.25) -- (1055.5,3194.01) .. controls (1055.5,3197.65) and (1052.55,3200.6) .. (1048.91,3200.6) -- (989.94,3200.6) .. controls (986.31,3200.6) and (983.36,3197.65) .. (983.36,3194.01) -- cycle ;
\draw   (1105,3122.38) .. controls (1105,3118.75) and (1107.95,3115.8) .. (1111.59,3115.8) -- (1179.41,3115.8) .. controls (1183.05,3115.8) and (1186,3118.75) .. (1186,3122.38) -- (1186,3142.14) .. controls (1186,3145.78) and (1183.05,3148.73) .. (1179.41,3148.73) -- (1111.59,3148.73) .. controls (1107.95,3148.73) and (1105,3145.78) .. (1105,3142.14) -- cycle ;
\draw   (1215.15,3051.58) .. controls (1215.15,3047.94) and (1218.1,3045) .. (1221.74,3045) -- (1282.41,3045) .. controls (1286.05,3045) and (1289,3047.94) .. (1289,3051.58) -- (1289,3071.34) .. controls (1289,3074.98) and (1286.05,3077.93) .. (1282.41,3077.93) -- (1221.74,3077.93) .. controls (1218.1,3077.93) and (1215.15,3074.98) .. (1215.15,3071.34) -- cycle ;
\draw    (942.5,3024.5) -- (911.59,3053.15) ;
\draw [shift={(909,3055.58)}, rotate = 314.33000000000004] [color={rgb, 255:red, 0; green, 0; blue, 0 }  ][line width=0.75]    (10.93,-3.29) .. controls (6.95,-1.4) and (3.31,-0.3) .. (0,0) .. controls (3.31,0.3) and (6.95,1.4) .. (10.93,3.29)   ;
\draw    (953.5,3024.5) -- (940.57,3162.38) ;
\draw [shift={(940.39,3164.37)}, rotate = 275.39] [color={rgb, 255:red, 0; green, 0; blue, 0 }  ][line width=0.75]    (10.93,-3.29) .. controls (6.95,-1.4) and (3.31,-0.3) .. (0,0) .. controls (3.31,0.3) and (6.95,1.4) .. (10.93,3.29)   ;
\draw    (970.5,3024.5) -- (1013.22,3164.11) ;
\draw [shift={(1013.82,3166.02)}, rotate = 252.64] [color={rgb, 255:red, 0; green, 0; blue, 0 }  ][line width=0.75]    (10.93,-3.29) .. controls (6.95,-1.4) and (3.31,-0.3) .. (0,0) .. controls (3.31,0.3) and (6.95,1.4) .. (10.93,3.29)   ;
\draw   (1007.27,2997.31) .. controls (1007.27,2993.67) and (1010.22,2990.72) .. (1013.85,2990.72) -- (1073.91,2990.72) .. controls (1077.55,2990.72) and (1080.5,2993.67) .. (1080.5,2997.31) -- (1080.5,3017.06) .. controls (1080.5,3020.7) and (1077.55,3023.65) .. (1073.91,3023.65) -- (1013.85,3023.65) .. controls (1010.22,3023.65) and (1007.27,3020.7) .. (1007.27,3017.06) -- cycle ;
\draw    (1038.05,3024) -- (1022.11,3165.68) ;
\draw [shift={(1022,3165.67)}, rotate = 276.19] [color={rgb, 255:red, 0; green, 0; blue, 0 }  ][line width=0.75]    (10.93,-3.29) .. controls (6.95,-1.4) and (3.31,-0.3) .. (0,0) .. controls (3.31,0.3) and (6.95,1.4) .. (10.93,3.29)   ;
\draw    (1244.59,3018.66) -- (1244.59,3042.3) ;
\draw [shift={(1244.59,3041.3)}, rotate = 270] [color={rgb, 255:red, 0; green, 0; blue, 0 }  ][line width=0.75]    (10.93,-3.29) .. controls (6.95,-1.4) and (3.31,-0.3) .. (0,0) .. controls (3.31,0.3) and (6.95,1.4) .. (10.93,3.29)   ;
\draw    (1246.23,3078.45) -- (1187.62,3121.21) ;
\draw [shift={(1186,3122.38)}, rotate = 323.89] [color={rgb, 255:red, 0; green, 0; blue, 0 }  ][line width=0.75]    (10.93,-3.29) .. controls (6.95,-1.4) and (3.31,-0.3) .. (0,0) .. controls (3.31,0.3) and (6.95,1.4) .. (10.93,3.29)   ;
\draw    (1144.74,2982.54) -- (1144.74,3112.7) ;
\draw [shift={(1144.74,3114.7)}, rotate = 270] [color={rgb, 255:red, 0; green, 0; blue, 0 }  ][line width=0.75]    (10.93,-3.29) .. controls (6.95,-1.4) and (3.31,-0.3) .. (0,0) .. controls (3.31,0.3) and (6.95,1.4) .. (10.93,3.29)   ;
\draw  [dash pattern={on 0.84pt off 2.51pt}] (853.02,3045.02) .. controls (853.02,3001.08) and (911.15,2965.46) .. (982.86,2965.46) .. controls (1054.57,2965.46) and (1112.7,3001.08) .. (1112.7,3045.02) .. controls (1112.7,3088.96) and (1054.57,3124.58) .. (982.86,3124.58) .. controls (911.15,3124.58) and (853.02,3088.96) .. (853.02,3045.02) -- cycle ;
\draw  [dash pattern={on 0.84pt off 2.51pt}] (1179.95,3032.06) .. controls (1179.95,2993.86) and (1214.26,2962.9) .. (1256.6,2962.9) .. controls (1298.93,2962.9) and (1333.24,2993.86) .. (1333.24,3032.06) .. controls (1333.24,3070.25) and (1298.93,3101.21) .. (1256.6,3101.21) .. controls (1214.26,3101.21) and (1179.95,3070.25) .. (1179.95,3032.06) -- cycle ;
\draw [line width=0.75]    (1094.75,2969.69) -- (1066.11,2978.09)(1093.9,2966.81) -- (1065.27,2975.21) ;
\draw [shift={(1058.01,2978.9)}, rotate = 343.65999999999997] [color={rgb, 255:red, 0; green, 0; blue, 0 }  ][line width=0.75]    (10.93,-3.29) .. controls (6.95,-1.4) and (3.31,-0.3) .. (0,0) .. controls (3.31,0.3) and (6.95,1.4) .. (10.93,3.29)   ;
\draw [shift={(1102,2966)}, rotate = 163.66] [color={rgb, 255:red, 0; green, 0; blue, 0 }  ][line width=0.75]    (10.93,-3.29) .. controls (6.95,-1.4) and (3.31,-0.3) .. (0,0) .. controls (3.31,0.3) and (6.95,1.4) .. (10.93,3.29)   ;
\draw [line width=0.75]    (1206.13,2973.78) -- (1185.84,2967.73)(1206.99,2970.91) -- (1186.69,2964.85) ;
\draw [shift={(1178.6,2964)}, rotate = 376.62] [color={rgb, 255:red, 0; green, 0; blue, 0 }  ][line width=0.75]    (10.93,-3.29) .. controls (6.95,-1.4) and (3.31,-0.3) .. (0,0) .. controls (3.31,0.3) and (6.95,1.4) .. (10.93,3.29)   ;
\draw [shift={(1214.22,2974.63)}, rotate = 196.62] [color={rgb, 255:red, 0; green, 0; blue, 0 }  ][line width=0.75]    (10.93,-3.29) .. controls (6.95,-1.4) and (3.31,-0.3) .. (0,0) .. controls (3.31,0.3) and (6.95,1.4) .. (10.93,3.29)   ;
\draw  [dash pattern={on 0.84pt off 2.51pt}] (1080.82,3132.26) .. controls (1080.82,3107.13) and (1110.3,3086.76) .. (1146.66,3086.76) .. controls (1183.01,3086.76) and (1212.49,3107.13) .. (1212.49,3132.26) .. controls (1212.49,3157.39) and (1183.01,3177.77) .. (1146.66,3177.77) .. controls (1110.3,3177.77) and (1080.82,3157.39) .. (1080.82,3132.26) -- cycle ;
\draw  [dash pattern={on 0.84pt off 2.51pt}] (892.86,3182.31) .. controls (892.86,3161.62) and (933.37,3144.85) .. (983.36,3144.85) .. controls (1033.34,3144.85) and (1073.86,3161.62) .. (1073.86,3182.31) .. controls (1073.86,3203) and (1033.34,3219.77) .. (983.36,3219.77) .. controls (933.37,3219.77) and (892.86,3203) .. (892.86,3182.31) -- cycle ;
\draw  [dash pattern={on 0.84pt off 2.51pt}] (1246.26,3164.6) .. controls (1246.26,3144.77) and (1268.25,3128.7) .. (1295.38,3128.7) .. controls (1322.51,3128.7) and (1344.5,3144.77) .. (1344.5,3164.6) .. controls (1344.5,3184.43) and (1322.51,3200.5) .. (1295.38,3200.5) .. controls (1268.25,3200.5) and (1246.26,3184.43) .. (1246.26,3164.6) -- cycle ;
\draw   (1261.5,3157.79) .. controls (1261.5,3154.15) and (1264.45,3151.2) .. (1268.09,3151.2) -- (1325.91,3151.2) .. controls (1329.55,3151.2) and (1332.5,3154.15) .. (1332.5,3157.79) -- (1332.5,3177.55) .. controls (1332.5,3181.18) and (1329.55,3184.13) .. (1325.91,3184.13) -- (1268.09,3184.13) .. controls (1264.45,3184.13) and (1261.5,3181.18) .. (1261.5,3177.55) -- cycle ;
\draw [line width=0.75]    (1268.38,3126.17) -- (1259.08,3108.97)(1271.02,3124.75) -- (1261.72,3107.54) ;
\draw [shift={(1256.6,3101.21)}, rotate = 421.62] [color={rgb, 255:red, 0; green, 0; blue, 0 }  ][line width=0.75]    (10.93,-3.29) .. controls (6.95,-1.4) and (3.31,-0.3) .. (0,0) .. controls (3.31,0.3) and (6.95,1.4) .. (10.93,3.29)   ;
\draw [shift={(1273.5,3132.5)}, rotate = 241.62] [color={rgb, 255:red, 0; green, 0; blue, 0 }  ][line width=0.75]    (10.93,-3.29) .. controls (6.95,-1.4) and (3.31,-0.3) .. (0,0) .. controls (3.31,0.3) and (6.95,1.4) .. (10.93,3.29)   ;

\draw (870.22,3063.93) node [anchor=north west][inner sep=0.75pt]   [align=left] {Thm.~\ref{thm:HC}};
\draw (918.5,2998.13) node [anchor=north west][inner sep=0.75pt]   [align=left] {Thm.~\ref{thm:ITcharacterization-1}};
\draw (1213.34,2992.31) node [anchor=north west][inner sep=0.75pt]   [align=left] {Thm.~\ref{thm:strongSLI}};
\draw (906.92,3173.04) node [anchor=north west][inner sep=0.75pt]   [align=left] {Thm.~\ref{thm:strongsse-2}};
\draw (1105,2955.26) node [anchor=north west][inner sep=0.75pt]   [align=left] {Thm.~\ref{thm:LSI-HC}};
\draw (983.36,3174.25) node [anchor=north west][inner sep=0.75pt]   [align=left] {Thm.~\ref{thm:strongqstability}};
\draw (1210.99,2943.35) node [anchor=north west][inner sep=0.75pt]   [align=left] {Log-Sobolev};
\draw (1105.76,3120.35) node [anchor=north west][inner sep=0.75pt]   [align=left] {Thm.~\ref{th:hcb} };
\draw (1215.15,3051.58) node [anchor=north west][inner sep=0.75pt]   [align=left] {Thm.~\ref{thm:lsi_cube}};
\draw (1007.27,2997.31) node [anchor=north west][inner sep=0.75pt]   [align=left] {Prop.~\ref{prop:equivalence}};
\draw (870.87,2928.17) node [anchor=north west][inner sep=0.75pt]   [align=left] {Brascamp--Lieb/};
\draw (870.87,2944.17) node [anchor=north west][inner sep=0.75pt]   [align=left] {Hypercontractivity (HC)};

\draw (1095.89,3180.57) node [anchor=north west][inner sep=0.75pt]   [align=left] {Strengthened HC};
\draw (923.51,3225.97) node [anchor=north west][inner sep=0.75pt]   [align=left] {Strong SSE/$q$-Stability};
\draw (1220.1,3207.36) node [anchor=north west][inner sep=0.75pt]   [align=left] {Edge-Isoperimetric};
\draw (1262.46,3155.75) node [anchor=north west][inner sep=0.75pt]   [align=left] {Thm.~\ref{thm:EdgeIsoper}};
\end{tikzpicture}
\caption{\label{fig:FIstructure}A graph of the main results in this and Sections \ref{ch:NICD} and \ref{ch:Stability}, where $\longrightarrow$ denotes an implication and $\Longleftrightarrow$ denotes a close relationship. }
\end{figure}

\section{Preliminary Definitions}\label{sec:def_BL}

Throughout this section, we assume that $\mathcal{X}$ and $\mathcal{Y}$ 
are finite  sets and $\pi_{XY}$ is a joint distribution on $\mathcal{X}\times\mathcal{Y}$. 

\begin{assumption}[Full support of marginals] \label{asm:full_supp}
The supports of $\pi_{X}$ and $\pi_{Y}$
are $\mathcal{X}$ and $\mathcal{Y}$ respectively. 
\end{assumption}

\begin{definition} Define the \emph{minimum relative entropy region}
with respect to a joint distribution $\pi_{XY}\in\calP(\calX\times\calY)$
as 
\begin{equation}
\mathcal{D}(\pi_{XY}):=\bigcup_{Q_X, Q_Y}\Big\{(D(Q_{X}\|\pi_{X}),D(Q_{Y}\|\pi_{Y}),\rvD(Q_{X},Q_{Y}\|\pi_{XY}))\Big\},
\end{equation}
where $\rvD(Q_{X},Q_{Y}\|\pi_{XY})$ is the minimal   relative entropy  with respect to $\pi_{XY}$ over all couplings
of $Q_{X}$ and $Q_{Y}$, defined in \eqref{eqn:min_rel_ent}. Due to Assumption~\ref{asm:full_supp},  any $Q_X$ and $Q_Y$ defined on $\calX$ and $\calY$ respectively are absolutely continuous with respect to $\pi_X$  and $\pi_Y$ respectively. 
\end{definition} The minimum relative entropy region is the subset of   $\bbR^3$ that is formed by the pair of relative entropies $(D(Q_{X}\|\pi_{X}),D(Q_{Y}\|\pi_{Y}))$
and the minimal relative entropy $\rvD(Q_{X},Q_{Y}\|\pi_{XY})$ as $Q_{X}$
and $Q_{Y}$ run over all distributions that are absolutely continuous with
respect to $\pi_{X}$ and $\pi_{Y}$  respectively. 

\begin{definition} \label{def:up_low} For $(s,t)\in[0,\alpha_{\max}(\pi_{X})]\times[0,\beta_{\max}(\pi_{Y})]$ (refer to~\eqref{eqn:alpha_max} for definitions), define the {\em upper} and {\em  lower envelopes} of the  minimal   relative entropy region $\mathcal{D}(\pi_{XY})$ respectively as
\begin{align}
\hspace{-.25in}\underline{\varphi}(s,t) 
 & :=\min_{Q_{X},Q_{Y}:D(Q_{X}\|\pi_{X})=s,D(Q_{Y}\|\pi_{Y})=t}\;\rvD(Q_{X},Q_{Y}\|\pi_{XY}),\label{eqn:def_varphi}
\end{align}
and 
\begin{align}
\hspace{-.25in}\overline{\varphi}(s,t) & :=\max_{Q_{X},Q_{Y}:D(Q_{X}\|\pi_{X})=s,D(Q_{Y}\|\pi_{Y})=t}\;\rvD(Q_{X},Q_{Y}\|\pi_{XY}).\label{eqn:def_psi}
\end{align}
\end{definition} Fix $(\alpha,\beta)\in[0,\alpha_{\max}(\pi_{X})]\times[0,\beta_{\max}(\pi_{Y})]$. Recall that the upper bound on the  forward LD exponent, previously defined in \eqref{eqn:LDsphere1},  is 
\begin{align}
\hspace{-.3in}\underline{\Upsilon}_{\mathrm{LD}}(\alpha,\beta)= \min_{ Q_X, Q_Y : D(Q_X\|\pi_X)\ge\alpha, D(Q_Y\|\pi_Y)\ge\beta} \rvD(Q_X,Q_Y\|\pi_{XY}), \label{eqn:LDsphere1_FI}
\end{align}
and  the lower bound on the reverse LD exponent, is
\begin{align}
\hspace{-.3in}\overline{\Upsilon}_{\mathrm{LD}}(\alpha,\beta) = \max_{ Q_X, Q_Y : D(Q_X\|\pi_X)\le\alpha, D(Q_Y\|\pi_Y)\le\beta} \rvD(Q_X,Q_Y\|\pi_{XY}).\label{eqn:LDsphere2_FI}
\end{align} 
  Based on the functions presented in Definition~\ref{def:up_low}, we may modify the definitions of $\underline{\Upsilon}_{\mathrm{LD}}$ and $\overline{\Upsilon}_{\mathrm{LD}}$ for the DSBS    to a source $(X,Y) \sim \pi_{XY}$ defined on a finite alphabet as follows
\begin{align}
\underline{\Upsilon}(\alpha,\beta) & :=\min_{s\ge\alpha,t\ge\beta}\;\mathbb{L}[\underline{\varphi}](s,t)\quad\mbox{and}\label{eq:FIlce}\\
\overline{\Upsilon}(\alpha,\beta) & :=\max_{s\leq\alpha,t\leq\beta}\;\mathbb{U}[\overline{\varphi}](s,t). \label{eq:FIuce}
\end{align}
Note that $\underline{\Upsilon}_{\mathrm{LD}}$ and $\overline{\Upsilon}_{\mathrm{LD}}$ in \eqref{eqn:LDsphere1_FI} and \eqref{eqn:LDsphere2_FI}  may not be convex and concave respectively for arbitrary $\pi_{XY}$ (see the discussion following Theorem~\ref{thm:strongsse-2}). Hence, in the modified definitions in  \eqref{eq:FIlce} and \eqref{eq:FIuce}, we take the   lower convex envelope for $\underline{\varphi}$ and the   upper concave envelope for $\overline{\varphi}$.  
With these operations, $\underline{\Upsilon}(\alpha,\beta)$
is convex and nondecreasing in $(\alpha,\beta)$, and $\overline{\Upsilon}(\alpha,\beta)$
is concave and nondecreasing in $(\alpha,\beta)$.\footnote{We say a function
of two variables is {\em nondecreasing} if it is nondecreasing
in one argument when the other is fixed.} Henceforth, we   omit the subscript $\mathrm{LD}$ in  \eqref{eq:FIlce} and \eqref{eq:FIuce}.

Before presenting the next definition, we recall the definition of $\theta_q$  in \eqref{eq:NICDtheta_q} as
\begin{equation}
\theta_{q}(Q_{X},Q_{Y}):=\rvD(Q_{X},Q_{Y}\|\pi_{XY})-\frac{D(Q_{Y}\|\pi_{Y})}{q'},
\end{equation}
 but   now, instead of being a DSBS, $\pi_{XY}$ is an arbitrary distribution defined on the finite set $\calX\times\calY$.
\begin{definition} \label{def:varphi_q}
For $q\ge1$ and for $s\in[0,\alpha_{\max}(\pi_{X})]$, define 
\begin{equation}
\varphi_{q}(s):=\min_{Q_{X},Q_{Y}:D(Q_{X}\|\pi_{X})=s}\;\theta_{q}(Q_{X},Q_{Y}),\label{eq:FIunderTheta_q-1}
\end{equation}
and for $q\in(-\infty,1)\backslash\{0\}$, define 
\begin{equation}
\varphi_{q}(s):=\left\{ \begin{array}{cc}
{\displaystyle \max_{Q_{X}:D(Q_{X}\|\pi_{X})=s}\;\min_{Q_{Y}}\;\theta_{q}(Q_{X},Q_{Y})} & 0<q<1\vspace{.03in}\\
{\displaystyle \max_{Q_{X}:D(Q_{X}\|\pi_{X})=s}\;\max_{Q_{Y}}\;\theta_{q}(Q_{X},Q_{Y})} & q<0
\end{array}\right. . \label{eq:FIpsi_q}
\end{equation}
\end{definition}

We denote $\Upsilon_{q}$
as the   lower convex envelope of $\varphi_{q}$ for $q\ge1$
and the   upper concave envelope of $\varphi_{q}$ for $q\in(-\infty,1)\backslash\{0\}$.
Specifically, for $\alpha\in[0,\alpha_{\max}(\pi_{X})]$,  
\begin{equation}
\Upsilon_{q}(\alpha):=\left\{ \begin{array}{cc}
{\displaystyle \min_{s\ge\alpha}\mathbb{L}[\varphi_{q}](s)} & q\ge1\vspace{.03in}\\
{\displaystyle \max_{s\le\alpha}\mathbb{U}[\varphi_{q}](s)} & q\in(-\infty,1)\backslash\{0\}
\end{array}\right..\label{eq:FIUpsilon_q}
\end{equation}
Observe that $\Upsilon_{q}$ is an alternative representation of  $\Upsilon_{q,\mathrm{LD}}$
defined in~\eqref{eq:NICDunderTheta_q} and~\eqref{eq:NICDoverTheta_q}. By definition, $\Upsilon_{q}(\alpha)$
is convex and nondecreasing in $\alpha$ for each $q\ge1$, and concave
and nondecreasing in $\alpha$ for each $q\in(-\infty,1)\backslash\{0\}$.

To avoid having to deal with the undefined arithmetic operation $\infty - \infty$, 
 we   adopt the following convention. 

 \begin{convention}
\label{conv:FI} When we write an optimization problem with distributions
as decision variables, we implicitly require that the distributions satisfy the condition that all the
integrals and relative entropies (appearing  in the constraints and the objective
function) to be {\em finite}. 
Otherwise, the value
of the optimization problem is set to~$+\infty$ if it is an infimization,
and $-\infty$ if it is a supremization. 
\end{convention}

To keep notation uncluttered, we also adopt the following convention.

 \begin{convention}
\label{conv:FI2}
When we write an optimization over functions $f$ and~$g$, we implicitly require these functions to be {\em nonnegative}. 
\end{convention}

\section{Classic Hypercontractivity and Brascamp--Lieb  Inequalities}
\label{sec:classic_BL}

In this section, we introduce a class of functional inequalities,
known as {\em Brascamp--Lieb (BL) inequalities}. We also review
the well-known H\"older  and hypercontractivity inequalities
which are special cases of the BL inequalities. We introduced the
hypercontractivity inequalities in the context of of the DSBS
in Section~\ref{sec:clt_half}. In contrast, here we study these inequalities for {\em arbitrary}
sources defined on finite alphabets.

\subsection{H\"older and Hypercontractivity Inequalities}
\label{sec:holder_HC}

We review the well-known {\em forward} and {\em reverse H\"older
inequalities} here. Given a joint distribution $\pi_{XY}$ and an extended
real number $p\in\mathbb{R}\cup\{\pm\infty\}$, for any pair of nonnegative
functions $(f,g)$, the forward and reverse H\"older inequalities are respectively 
\begin{align}
\langle f,g\rangle & \le\Vert f\Vert_{p}\Vert g\Vert_{q}\quad\mbox{if}\quad p\ge1\quad\mbox{and}\label{eq:FIFHolder}\\
\langle f,g\rangle & \geq\Vert f\Vert_{p}\Vert g\Vert_{q}\quad\mbox{if}\quad p\le1,\label{eq:FIRHolder}
\end{align}
where 
$q$ is the H\"older conjugate of $p$.
Since the (pseudo) $L^{q}$-norms $\Vert\cdot\Vert_{q}$
are nondecreasing in $q\in\bbR\cup\{\pm\infty\}$, the scalar $q$
in \eqref{eq:FIFHolder} can be replaced by  any $q\ge p'$. Similarly,
$q$ in~\eqref{eq:FIRHolder} can be replaced by   any $q\le p'$.

\enlargethispage{-2\baselineskip}
If $X=Y$ (i.e., $P_{Y|X }(\cdot|x)$ places all its mass at $x$ for every $x\in\calX$), then the forward and reverse H\"older inequalities are
sharp in the following sense. If $p>1$, then~\eqref{eq:FIFHolder} becomes an equality if and only if $|f|^p$ and $g^{p'}$ are  {\em linearly dependent}, i.e., there exist real numbers $a,b\ge 0$,
not both zero, such that $a|f|^{p}=b|g|^{p'}$ holds ($\pi_{X}$-almost
everywhere). If $p<1$, $\langle f,g\rangle<\infty$ and $\Vert g\Vert_{p'}>0$,
then~\eqref{eq:FIRHolder} is an equality if and only if the equality
$|f|^{p}=a|g|^{p'}$ holds ($\pi_{X}$-almost everywhere) for some $a\ge0$.
Moreover, for the case $X=Y$, the parameters $(p,q)$ in \eqref{eq:FIFHolder}
and \eqref{eq:FIRHolder} cannot be improved in the sense that given
$p\ge1$, for any $q<p'$, there exists a pair of $(f,g)$ that violates~\eqref{eq:FIFHolder};
similarly, given $p\le1$, for any $q>p'$, there exists a pair of
$(f,g)$ that violates~\eqref{eq:FIRHolder}.

However, the H\"older inequalities are  not sharp in general when
$X\neq Y$ (which is the case of interest to us). If $X\neq Y$, then
the parameters $(p,q)$ in the H\"older inequalities can be ``improved''.
Specifically, given a joint distribution $\pi_{XY}$ and $p\ge1$, we
are interested in how {\em small} $q\in\mathbb{R}\cup\{\pm\infty\}$
can be such that for {\em any} nonnegative functions $f:\mathcal{X}\to[0,\infty)$
and $g:\mathcal{Y}\to[0,\infty)$, it holds that 
\begin{align}
\langle f,g\rangle & \le\Vert f\Vert_{p}\Vert g\Vert_{q}.\label{eq:FIFHC}
\end{align}
By the forward H\"older inequality, the infimum of all such $q$'s is at most~$p'$, the H\"older conjugate of $p$. Similarly, given $p\le1$, we
are interested in how {\em large} $q\in\mathbb{R}\cup\{\pm\infty\}$
can be such that for any nonnegative functions $f$ and $g$, it holds that 
\begin{align}
\langle f,g\rangle & \ge\Vert f\Vert_{p}\Vert g\Vert_{q}.\label{eq:FIRHC}
\end{align}
For this case, the supremum of  all such $q$'s is at least~$p'$.
Inequalities~\eqref{eq:FIFHC} and \eqref{eq:FIRHC} for the case $X\ne Y$ are respectively
termed the {\em forward} and {\em reverse hypercontractivity
inequalities}, since the forward and reverse H\"older inequalities
in \eqref{eq:FIFHolder} and \eqref{eq:FIRHolder} respectively are
regarded as the (usual) {\em contractivity} inequalities, and inequalities
\eqref{eq:FIFHC} and \eqref{eq:FIRHC} with improved $(p,q)$ are
{\em strengthenings} of the forward and reverse H\"older inequalities.

Inequalities \eqref{eq:FIFHC} and \eqref{eq:FIRHC} motivate the following definitions. 
\begin{definition} \label{def:FRHypercontractivityR} The {\em
forward} and {\em reverse hypercontractivity regions} \cite{beigi2018phi,liu2018information}
are respectively defined as 
\begin{align}
\mathcal{R}_{\mathrm{FH}}(\pi_{XY}) & :=\big\{(p,q)\in[1,\infty)^{2}:\langle f,g\rangle\le\Vert f\Vert_{p}\Vert g\Vert_{q},~\forall \, f,g\ge0\big\}\label{eq:FIFHR}
\end{align}
and 
\begin{align}
\mathcal{R}_{\mathrm{RH}}(\pi_{XY}) := \big\{(p,q)\in (-\infty,1]^{2} : \langle f,g\rangle\ge\Vert f\Vert_{p}\Vert g\Vert_{q},~\forall\,  f,g\ge 0\big\}.\label{eq:FIRHR}
\end{align}
\end{definition} By definition, these two regions correspond to the
sets of parameters $(p,q)$ for which the forward or reverse hypercontractivity
inequalities in~\eqref{eq:FIFHC} and~\eqref{eq:FIRHC} hold. We
remark that the notion of {\em hypercontractivity ribbons} was
introduced in \citet[Eqn.~(6.117)]{anantharam2014hypercontractivity}
and \citet{kamath2015reverse}, prior to the hypercontractivity regions
 being introduced in
\citet{beigi2018phi} and \citet{liu2018information}. The hypercontractivity
ribbons correspond to the hypercontractivity regions apart from the exclusions of 
 the H\"older regions $\{(p,q)\in[1,\infty)^{2}:q\ge p'\}$
and $\{(p,q)\in(-\infty,1]^{2}:q\le p'\}$, and that the H\"older
conjugate of $q$ is taken.

We can write $\mathcal{R}_{\mathrm{RH}}(\pi_{XY})$ as the disjoint
union of four sets 
\begin{align}
\mathcal{R}_{\mathrm{RH}}^{++}(\pi_{XY}) & :=(0,1]^{2}\cap\mathcal{R}_{\mathrm{RH}}(\pi_{XY}),\label{eqn:subreg1}\\
\mathcal{R}_{\mathrm{RH}}^{+-}(\pi_{XY}) & :=\big((0,1]\times(-\infty,0)\big)\cap\mathcal{R}_{\mathrm{RH}}(\pi_{XY}),\label{eqn:subreg2}\\
\mathcal{R}_{\mathrm{RH}}^{-+}(\pi_{XY}) & :=\big((-\infty,0)\times(0,1]\big)\cap\mathcal{R}_{\mathrm{RH}}(\pi_{XY}),\quad\mbox{and}\label{eqn:subreg3}\\
\mathcal{R}_{\mathrm{RH}}^{--}(\pi_{XY}) & :=(-\infty,0]^{2}.\label{eqn:subreg4}
\end{align}
The forward hypercontractivity region and the first three subregions
of the reverse hypercontractivity region in \eqref{eqn:subreg1}, \eqref{eqn:subreg2}, and \eqref{eqn:subreg3}
admit the following information-theoretic characterizations; see \cite{ahlswede1976spreading,carlen2009subadditivity,kamath2015reverse,beigi2016equivalent,liu2018information,yu2021strong}.

\begin{theorem}[Information-theoretic characterizations of hypercontractivity regions] \label{thm:HC} The forward hypercontractivity region
$\mathcal{R}_{\mathrm{FH}}(\pi_{XY})$ can be expressed in terms of
the minimal  relative entropy as the set of $(p,q)\in[1,\infty)^{2}$
such that 
\begin{align}
\rvD(Q_{X},Q_{Y}\|\pi_{XY})\ge\frac{1}{p}\, D(Q_{X}\|\pi_{X})+\frac{1}{q}\, D(Q_{Y}\|\pi_{Y}).\label{eq:FIFHR-2}
\end{align}
In addition, $\mathcal{R}_{\mathrm{RH}}^{++}(\pi_{XY})$ is the set
of all $(p,q)\in(0,1]^{2}$ such that 
\begin{equation}
\rvD(Q_{X},Q_{Y}\|\pi_{XY})\le\frac{1}{p}\, D(Q_{X}\|\pi_{X})+\frac{1}{q}\, D(Q_{Y}\|\pi_{Y}). \label{eq:FIRHplusplus}
\end{equation}
Finally, $\mathcal{R}_{\mathrm{RH}}^{+-}(\pi_{XY})$
is the set of all $(p,q)\in(0,1]\times(-\infty,0)$ such that 
\begin{equation}
\min_{Q_{Y}}\Big\{\rvD(Q_{X},Q_{Y}\|\pi_{XY})-\frac{1}{q}\, D(Q_{Y}\|\pi_{Y})\Big\}\le\frac{1}{p}\, (Q_{X}\|\pi_{X}).\label{eq:FIRHplusminus}
\end{equation}
\end{theorem}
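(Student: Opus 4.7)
My plan is to prove all three characterizations by a single duality argument based on the Donsker--Varadhan (Gibbs) variational formula, and then treat the sign of the exponents as the source of the three different-looking cases. The starting observation is that after the substitution $F = f^{p}/\|f\|_{p}^{p}$ and $G = g^{q}/\|g\|_{q}^{q}$, the inequality $\langle f,g\rangle \le \|f\|_{p}\|g\|_{q}$ is equivalent (for $p,q>0$) to $\mathbb{E}_{\pi_{XY}}[F(X)^{1/p}G(Y)^{1/q}] \le 1$ for all probability densities $F \in \calP(\calX)/\pi_{X}$ and $G \in \calP(\calY)/\pi_{Y}$. The Donsker--Varadhan formula
\[
\log \mathbb{E}_{\pi_{XY}}[F^{1/p}G^{1/q}] = \sup_{Q_{XY}}\Big\{\tfrac{1}{p}\mathbb{E}_{Q_{XY}}[\log F(X)] + \tfrac{1}{q}\mathbb{E}_{Q_{XY}}[\log G(Y)] - D(Q_{XY}\|\pi_{XY})\Big\}
\]
then converts the hypercontractivity inequality into the statement that, for every joint $Q_{XY}$ on $\calX\times\calY$,
\[
\sup_{F,G}\Big\{\tfrac{1}{p}\mathbb{E}_{Q_{X}}[\log F] + \tfrac{1}{q}\mathbb{E}_{Q_{Y}}[\log G]\Big\} \le D(Q_{XY}\|\pi_{XY}),
\]
where the suprema are over densities. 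The inner supremum over $F$ equals $D(Q_{X}\|\pi_{X})$ (attained at $F = \rmd Q_{X}/\rmd\pi_{X}$), and similarly for $G$. Taking the infimum of the right-hand side over couplings $Q_{XY}\in\calC(Q_{X},Q_{Y})$ yields exactly \eqref{eq:FIFHR-2}. The direction ``($\Rightarrow$)'' follows by plugging $F = \rmd Q_{X}/\rmd\pi_{X}$, $G = \rmd Q_{Y}/\rmd\pi_{Y}$ and the optimal coupling back into the variational formula; the direction ``($\Leftarrow$)'' is the calculation just performed.

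For the reverse region $\mathcal{R}_{\mathrm{RH}}^{++}(\pi_{XY})$ with $p,q\in(0,1]$, I will run the same substitution but reverse every inequality: the constraint becomes $\mathbb{E}_{\pi_{XY}}[F^{1/p}G^{1/q}] \ge 1$. Because $1/p,1/q \ge 1$ now, the map $F\mapsto F^{1/p}$ is convex, so Jensen/Donsker--Varadhan is used in the ``lower bound'' form and, after the identical dualization, one obtains \eqref{eq:FIRHplusplus}. The key technical point is to justify swapping $\sup_{F,G}$ and $\inf_{Q_{XY}}$; I would do this by appealing to a minimax theorem (the objective is linear in $(\log F, \log G)$ and convex in $Q_{XY}$) or, more elementarily, by observing that the optimal $F,G$ always take the form of marginal density ratios and so the ``sup'' is already a pointwise evaluation.

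The genuinely different case is $\mathcal{R}_{\mathrm{RH}}^{+-}(\pi_{XY})$ with $p\in(0,1]$ and $q<0$. Here $\|g\|_{q}$ is a pseudo-norm that requires $g>0$, and the substitution $G = g^{q}/\|g\|_{q}^{q}$ produces $G^{1/q}$ with a negative exponent, so the variational identity for $\log\mathbb{E}[F^{1/p}G^{1/q}]$ now involves maximization over $F$ but a sign-reversed contribution from $G$. Carrying this through, the supremum over $G$ (at fixed $Q_{Y}$) turns into $-\frac{1}{q}D(Q_{Y}\|\pi_{Y})$ with a negative sign in front, and the ensuing exchange of operations produces the inner minimization over $Q_{Y}$ that appears in \eqref{eq:FIRHplusminus}. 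The subregion $\mathcal{R}_{\mathrm{RH}}^{-+}(\pi_{XY})$ is handled symmetrically.

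The main obstacle, I expect, will be precisely the $\mathcal{R}_{\mathrm{RH}}^{+-}$ case: first, one must verify that the optimization over $G>0$ is well posed (so that Convention~\ref{conv:FI} applies and no ``$\infty-\infty$'' arises, which is why Assumption~\ref{asm:full_supp} is imposed); second, one must correctly interchange an inner $\inf_{Q_{Y}}$ with an outer $\inf_{Q_{XY}\in\calC(Q_{X},Q_{Y})}$, which is not a standard minimax situation because both are minimizations and the coupling constraint makes the feasible set depend on $Q_{Y}$. I would handle this by reparametrizing $Q_{XY}$ directly (without first fixing marginals), absorbing the $Q_{Y}$-minimization into the joint optimization, and only at the end identifying $Q_{X}=$ marginal of $Q_{XY}$ as the free parameter. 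This is also the place where one has to use the fact that $p>0$ (so that the $Q_{X}$-contribution still has its ``usual'' sign) to match the form of \eqref{eq:FIRHplusminus}.
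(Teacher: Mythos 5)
Your Donsker--Varadhan substitution is exactly the paper's starting point (the paper's Lemma~\ref{lem:dual} is the Gibbs variational formula in the form you need), and your treatment of the forward region $\mathcal{R}_{\mathrm{FH}}(\pi_{XY})$ is correct: there the condition ``$\sup_{Q_{XY}}\{\cdots\}\le0$ for all $F,G$'' really is a universal quantifier over the triple $(F,G,Q_{XY})$, so you may freely compute the inner $\sup_{F,G}$ pointwise, identify it with $\tfrac{1}{p}D(Q_X\|\pi_X)+\tfrac{1}{q}D(Q_Y\|\pi_Y)$, and then take the infimum over couplings to get \eqref{eq:FIFHR-2}. This matches the paper's route via Proposition~\ref{prop:ITcharacterization} and the harmless ``swap of two infima'' for $\underline{\Lambda}_{p,q}$.

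The gap is in the reverse regions, where the logical structure is genuinely different and your proposed minimax swap does not go through. After the substitution, the condition for $\mathcal{R}_{\mathrm{RH}}^{++}$ reads $\inf_{F,G}\sup_{Q_{XY}}\Phi(F,G;Q_{XY})\ge0$, and you wish to replace this by $\sup_{Q_{XY}}\inf_{F,G}\Phi\ge0$ so that the inner optimization over $(F,G)$ becomes pointwise. But Sion's (or von Neumann's) minimax theorem requires the objective to be quasi-\emph{convex} in the $\inf$-variable, whereas $\Phi$ is \emph{concave} in $(F,G)$ (as $\log$ is concave); indeed for fixed $Q_{XY}$ one has $\inf_F\bbE_{Q_X}[\log F]=-\infty$ over probability densities $F$, so $\sup_{Q_{XY}}\inf_{F,G}\Phi=-\infty$ and the two sides of the minimax inequality are definitely not equal. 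Reparametrizing via $\ell=\log F$ makes the objective linear, but the constraint $\bbE_{\pi_X}[e^\ell]=1$ is then a nonconvex level set, so Sion still does not apply. Your ``more elementary'' fallback---plugging in $F=\rmd Q_X/\rmd\pi_X$, $G=\rmd Q_Y/\rmd\pi_Y$ and evaluating at the optimizing $Q_{XY}$ from the variational formula---produces the information inequality \eqref{eq:FIRHplusplus}, but at the marginals $(R_X^*,R_Y^*)$ of the tilted distribution $R_{XY}^*\propto\pi_{XY}\,(\rmd Q_X/\rmd\pi_X)^{1/p}(\rmd Q_Y/\rmd\pi_Y)^{1/q}$, \emph{not} at the target pair $(Q_X,Q_Y)$; there is no fixed-point argument supplied that maps every target pair onto some such $(R_X^*,R_Y^*)$.

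What the paper actually does for the hard direction of the reverse regions is not a one-shot duality argument at all: it tensorizes the reverse BL exponent $\overline{\Lambda}_{p,q}$ (Lemma~\ref{lem:BLtensorization}), applies the reverse hypercontractivity inequality on $\pi_{XY}^n$ to indicators of type classes $\mathcal{T}_{T_X^{(n)}}\times\mathcal{T}_{T_Y^{(n)}}$ with types converging to $(Q_X,Q_Y)$, and invokes Sanov's theorem (Theorem~\ref{thm:sanov}) to extract, in the limit $n\to\infty$, precisely $\theta(Q_X,Q_Y)=\rvD(Q_X,Q_Y\|\pi_{XY})-\tfrac{1}{p}D(Q_X\|\pi_X)-\tfrac{1}{q}D(Q_Y\|\pi_Y)$ as a lower bound on $\overline{\Lambda}_{p,q}$, for an arbitrary prescribed $(Q_X,Q_Y)$. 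This is the ingredient your proposal lacks; for $\mathcal{R}_{\mathrm{RH}}^{+-}$ the same tensorization--Sanov step is needed (with the additional inner $\min_{Q_Y}$ in \eqref{eq:FIRHplusminus} coming from the sign of $1/q<0$), and your sketch inherits the same gap there.
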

 
By symmetry, $\mathcal{R}_{\mathrm{RH}}^{-+}(\pi_{XY})$ can be characterized in an analogous manner to $\mathcal{R}_{\mathrm{RH}}^{+-}(\pi_{XY})$ in \eqref{eq:FIRHplusminus}.
The proof of Theorem~\ref{thm:HC} is provided in Section~\ref{sec:BL_ineq},
since it is a special case of the information-theoretic characterizations
of the BL inequalities, which we present therein. Theorem~\ref{thm:HC} can be specialized to Theorem~\ref{thm:hyper2} (the two function version of the hypercontractivity inequalities for the DSBS); see \cite{nair2016evaluating,nair2017reverse}.

Hypercontractivity inequalities were investigated 
in \cite{bonami1968ensembles,kiener1969uber,schreiber1969fermeture,bonami1970etude,gross1975logarithmic,ahlswede1976spreading, borell1982positivity,mossel2013reverse} among others.
Information-theoretic characterizations of the hypercontractivity
(and BL) inequalities can be traced back to the seminal work of \citet{ahlswede1976spreading}
in which, instead of the hypercontractivity regions, the hypercontractivity
{\em constants} (which are quantities induced by the hypercontractivity regions)
were characterized in terms of relative entropies. The information-theoretic
characterization of the forward hypercontractivity region is implied
by the information-theoretic characterization of the forward BL inequalities
on Euclidean spaces in \citet{carlen2009subadditivity}; this was
independently discovered later by \citet{nair2014equivalent} in the case of
finite alphabets.

An information-theoretic
characterization of $\mathcal{R}_{\mathrm{RH}}^{++}(\pi_{XY})$ for
finite alphabets was provided by \citet{kamath2015reverse}. Subsequently,
an information-theoretic characterization of the entire reverse hypercontractivity
region for finite alphabets was shown by \citet{beigi2016equivalent}.
Extensions of these characterizations to Polish spaces were studied
by \citet{liu2018information} using 
a minimax theorem known as the {\em Fenchel--Rockafellar
duality}.


As a consequence of Definitions~\ref{def:up_low}, \ref{def:varphi_q}, and Theorem \ref{thm:HC}, the regions $\mathcal{R}_{\mathrm{FH}}(\pi_{XY})$,
$\mathcal{R}_{\mathrm{RH}}^{++}(\pi_{XY})$, and $\mathcal{R}_{\mathrm{RH}}^{+-}(\pi_{XY})$
also admit the following  equivalent characterizations: 
\begin{align}
 \mathcal{R}_{\mathrm{FH}}(\pi_{XY}) &  = \Big\{(p,q)\in[1,\infty)^{2}:\underline{\varphi}(\alpha,\beta)\ge\frac{\alpha}{p}+\frac{\beta}{q},\, \forall\,\alpha,\beta\ge0\Big\}\\*
 &  = \Big\{(p,q)\in[1,\infty)^{2}:\underline{\Upsilon}(\alpha,\beta)\ge\frac{\alpha}{p}+\frac{\beta}{q},\, \forall\,\alpha,\beta\ge0\Big\},\label{eq:FI-69-1-1}\\
 \mathcal{R}_{\mathrm{RH}}^{++}(\pi_{XY}) &  = \Big\{(p,q)\in(0,1]^{2}:\overline{\varphi}(\alpha,\beta)\leq\frac{\alpha}{p}+\frac{\beta}{q},\, \forall\,\alpha,\beta\ge0\Big\}\\*
 &  = \Big\{(p,q)\in( 0,1]^{2}:\overline{\Upsilon}(\alpha,\beta)\leq\frac{\alpha}{p}+\frac{\beta}{q},\, \forall\,\alpha,\beta\ge0\Big\}, \label{eq:FI-70-1-2}
\end{align}
and 
\begin{align}
\mathcal{R}_{\mathrm{RH}}^{+-}(\pi_{XY}) & =\Big\{(p,q)\in(0,1]\times(-\infty,0): \varphi_{q'}(\alpha)\leq\frac{\alpha}{p},\,\forall\, \alpha\ge0\Big\}\\
 & =\Big\{(p,q)\in(0,1]\times(-\infty,0): {\Upsilon}_{q'}(\alpha)\le\frac{\alpha}{p},\,\forall\, \alpha\ge0\Big\},\label{eq:FI-70-1-1-1-1}
\end{align}
where $q'$ is the H\"older conjugate of $q$.


\subsection{Brascamp--Lieb Inequalities}
\label{sec:BL_ineq}

The Brascamp--Lieb (BL) inequalities constitute a class of inequalities
that generalizes the families of H\"older and hypercontractivity inequalities.
The {\em forward} and {\em reverse BL inequalities} are defined
as follows. Given a distribution $\pi_{XY}$ and $p,q\in\mathbb{R}$,
for any pair of nonnegative functions $f:\mathcal{X}\to[0,\infty)$
and $g:\mathcal{Y}\to[0,\infty)$, 
\begin{align}
\langle f,g\rangle & \le\overline{C} \, \Vert f\Vert_{p}\Vert g\Vert_{q}\quad\mbox{and}\label{eq:FIFBL}\\
\langle f,g\rangle & \geq\underline{C} \, \Vert f\Vert_{p}\Vert g\Vert_{q},\label{eq:FIRBL}
\end{align}
where $\overline{C}=\overline{C}_{p,q}$ and $\underline{C}=\underline{C}_{p,q}$
depend only on $p$ and $q$ given the distribution $\pi_{XY}$. The
hypercontractivity inequalities in~\eqref{eq:FIFHC} and~\eqref{eq:FIRHC}
correspond to the BL inequalities with $\overline{C}=1$ in \eqref{eq:FIFBL}
and $\underline{C}=1$ in~\eqref{eq:FIRBL} respectively.

The forward version of  the BL inequalities in \eqref{eq:FIFBL} was originally studied in the
1970s by \citet{brascamp1976best}, who were motivated by problems
in particle physics. The reverse version in \eqref{eq:FIRBL} was initially studied by
\citet{barthe1998reverse}.  In fact, the inequalities in \eqref{eq:FIFBL}
and \eqref{eq:FIRBL} are special cases of the original forward and
reverse BL inequalities. We only discuss these
special cases.

\begin{definition} The (optimal) {\em forward} and {\em reverse
BL constants} are respectively defined as 
\begin{align}
\overline{C}_{p,q}^{*}(X;Y) & :=\sup_{f,g:\Vert f\Vert_{p}\Vert g\Vert_{q}>0}\frac{\langle f,g\rangle}{\Vert f\Vert_{p}\Vert g\Vert_{q}}\quad\mbox{and}\label{eq:FIFBL-1-1-2-2}\\
\underline{C}_{p,q}^{*}(X;Y) & :=\inf_{f,g:\Vert f\Vert_{p}\Vert g\Vert_{q}>0}\frac{\langle f,g\rangle}{\Vert f\Vert_{p}\Vert g\Vert_{q}}.\label{eq:FIRBL-1-1-2-2}
\end{align}
Additionally, define the {\em forward} and {\em reverse BL exponents} respectively
as 
\begin{align}
\underline{\Lambda}_{p,q}(X;Y) & :=-\log\overline{C}_{p,q}^{*}(X;Y)\quad\mbox{and}\label{eq:FI}\\
\overline{\Lambda}_{p,q}(X;Y) & :=-\log\underline{C}_{p,q}^{*}(X;Y).\label{eq:FI-1}
\end{align}
\end{definition}

It is well-known that the forward and reverse BL exponents possess
the important tensorization and the data processing properties.

\begin{lemma}[Tensorization] \label{lem:BLtensorization} Let $(X^{n},Y^{n})=\{(X_{1},Y_{1}),\ldots,(X_{n},Y_{n})\}$
be a collection of pairs of random variables that are mutually independent.
Then 
\begin{align}
\underline{\Lambda}_{p,q}(X^{n};Y^{n}) & =\sum_{i=1}^{n}\underline{\Lambda}_{p,q}(X_{i};Y_{i})\quad\mbox{and}\quad\label{eq:FI-2}\\
\overline{\Lambda}_{p,q}(X^{n};Y^{n}) & =\sum_{i=1}^{n}\overline{\Lambda}_{p,q}(X_{i};Y_{i}).\label{eq:FI-4}
\end{align}
\end{lemma}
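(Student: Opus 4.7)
\textbf{Proof proposal for Lemma \ref{lem:BLtensorization}.} The plan is to prove the equivalent multiplicative statements $\overline{C}^{*}_{p,q}(X^n;Y^n)=\prod_{i=1}^{n}\overline{C}^{*}_{p,q}(X_i;Y_i)$ and $\underline{C}^{*}_{p,q}(X^n;Y^n)=\prod_{i=1}^{n}\underline{C}^{*}_{p,q}(X_i;Y_i)$, from which the claim follows by taking $-\log$. Each equality will be established via two matching inequalities.

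First I would establish the easy ``product witness'' direction. For any tuples of nonnegative functions $\{f_i:\calX_i\to[0,\infty)\}_{i=1}^n$ and $\{g_i:\calY_i\to[0,\infty)\}_{i=1}^n$, define product test functions $f(x^n):=\prod_{i=1}^{n}f_i(x_i)$ and $g(y^n):=\prod_{i=1}^{n}g_i(y_i)$. By the independence of the pairs $(X_i,Y_i)$ and Fubini's theorem,
\[
\langle f,g\rangle \;=\; \prod_{i=1}^n\langle f_i,g_i\rangle_i,\qquad \Vert f\Vert_p=\prod_{i=1}^n\Vert f_i\Vert_{p,i},\qquad \Vert g\Vert_q=\prod_{i=1}^n\Vert g_i\Vert_{q,i},
\]
so the ratio $\langle f,g\rangle/(\Vert f\Vert_p\Vert g\Vert_q)$ factorizes across coordinates. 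Taking the supremum (resp.\ infimum) coordinate-wise yields $\overline{C}^{*}_{p,q}(X^n;Y^n)\ge\prod_i\overline{C}^{*}_{p,q}(X_i;Y_i)$ and $\underline{C}^{*}_{p,q}(X^n;Y^n)\le\prod_i\underline{C}^{*}_{p,q}(X_i;Y_i)$.

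For the matching direction in the forward case, I would induct on $n$; the case $n=1$ is trivial. For $n\ge 2$, split $X^n=(X^{n-1},X_n)$ and $Y^n=(Y^{n-1},Y_n)$ and fix any nonnegative $f,g$ on the product spaces. Since $(X^{n-1},Y^{n-1})$ is independent of $(X_n,Y_n)$, for each realisation $(x_n,y_n)$ we may apply the inductive hypothesis to the sliced functions $f(\cdot,x_n)$ and $g(\cdot,y_n)$, obtaining
\[
\bbE\big[f(X^{n-1},x_n)g(Y^{n-1},y_n)\big]\;\le\;\prod_{i=1}^{n-1}\overline{C}^{*}_{p,q}(X_i;Y_i)\cdot F(x_n)G(y_n),
\]
where $F(x_n):=\Vert f(\cdot,x_n)\Vert_p$ and $G(y_n):=\Vert g(\cdot,y_n)\Vert_q$. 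Taking expectation over $(X_n,Y_n)$ and then applying the $n=1$ forward BL inequality to the pair $(F,G)$ gives
\[
\langle f,g\rangle \;\le\; \bigg(\prod_{i=1}^{n-1}\overline{C}^{*}_{p,q}(X_i;Y_i)\bigg)\cdot \overline{C}^{*}_{p,q}(X_n;Y_n)\cdot\Vert F\Vert_p\Vert G\Vert_q.
\]
A Fubini argument ($F^p=\bbE_{X^{n-1}}[f^p]$ pointwise in $x_n$) then yields $\Vert F\Vert_p=\Vert f\Vert_p$ and similarly $\Vert G\Vert_q=\Vert g\Vert_q$, closing the induction.

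The reverse BL case follows by the same conditioning-plus-induction scheme with all inequality directions flipped. The main obstacle, and the only place where real care is needed, is handling pseudo-$L^p$ norms when $p$ or $q$ lies outside $[1,\infty]$ --- in particular when either is negative or in the limiting cases $\{0,\pm\infty\}$. The conditional-slicing step and the identity $\Vert F\Vert_p=\Vert f\Vert_p$ remain valid when $p>0$ via Fubini applied to $f^p$, and the cases $p\in\{0,-\infty,+\infty\}$ are recovered by the continuous extensions used to define the pseudo-norms. Provided Convention~\ref{conv:FI} is in force so that the relevant BL constants are finite, these technicalities can be dispatched uniformly, and the tensorization identities follow.
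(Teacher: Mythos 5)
Your proof is correct and takes essentially the same route as the paper (due to Beigi--Nair): condition on one coordinate at a time and apply the single-coordinate BL inequality; your induction is just the paper's telescoping chain $\langle f,g\rangle \le \overline{C}_n\,\bbE[\|f(X^{n-1},\cdot)\|_p\|g(Y^{n-1},\cdot)\|_q]\le\cdots$ read in the opposite peeling order. You are in fact a bit more complete, since you also spell out the ``product witness'' inequality $\overline{C}^{*}_{p,q}(X^n;Y^n)\ge\prod_i\overline{C}^{*}_{p,q}(X_i;Y_i)$ and the Fubini identity $\Vert F\Vert_p=\Vert f\Vert_p$, both of which the paper leaves implicit.
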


\begin{proof} The proof here is due to \citet{beigi2016equivalent}
and is based on applying the one-dimensional BL inequality in \eqref{eq:FIFBL} to each
pair of random variables iteratively. To prove \eqref{eq:FI-2}, it
suffices to show that if for each $i\in[n]$, there exist a constant
$\overline{C}_{i}$ such that $\langle f_{i},g_{i}\rangle\le\overline{C}_{i}\Vert f_{i}\Vert_{p}\Vert g_{i}\Vert_{q}$
holds for all nonnegative $f_{i}$ and $g_{i}$ defined
on $\mathcal{X}$ and $\mathcal{Y}$, then $\langle f,g\rangle\le\overline{C}\Vert f\Vert_{p}\Vert g\Vert_{q}$
holds for all nonnegative $f$ and $g$   defined on $\mathcal{X}^{n}$
and $\mathcal{Y}^{n}$, where $\overline{C}=\prod_{i=1}^{n}\overline{C}_{i}$.
This point can be shown  as follows: 
\begin{align}
\langle f,g\rangle & =\mathbb{E}_{X^{n-1},Y^{n-1}}\big[\mathbb{E}_{X_{n},Y_{n}}[f(X^{n})g(Y^{n})\mid X^{n-1},Y^{n-1}]\big]\\
 & \le\overline{C}_{n}\mathbb{E}_{X^{n-1},Y^{n-1}}\big[\Vert f(X^{n-1},\cdot)\Vert_{p}\Vert g(Y^{n-1},\cdot)\Vert_{q}\big]\\
 & \le\overline{C}_{n}\overline{C}_{n-1}\mathbb{E}_{X^{n-2},Y^{n-2}}\big[\Vert f(X^{n-2},\cdot)\Vert_{p}\Vert g(Y^{n-2},\cdot)\Vert_{q}\big]\\
 & \quad\vdots\nn\\
 & \le\overline{C}\Vert f\Vert_{p}\Vert g\Vert_{q}.
\end{align}
Hence, we have \eqref{eq:FI-2}. The inequality in \eqref{eq:FI-4}
follows similarly. \end{proof}

\begin{lemma}[Data processing inequalities] \label{thm:(Data-processing-inequality).-2}
Assume random variables $U,X,Y,$ and $V$ form a Markov chain $U-X-Y-V$
in this order. Then for $p,q\ge1$, 
\begin{align}
\underline{\Lambda}_{p,q}(X;Y) & \leq\underline{\Lambda}_{p,q}(U;V),\label{eq:FI-19-2}
\end{align}
and for $p,q\le1$, 
\begin{equation}
\overline{\Lambda}_{p,q}(X;Y)\ge\overline{\Lambda}_{p,q}(U;V).\label{eq:FI-5}
\end{equation}
Moreover, if $U$ and $V$ are deterministic functions of $X$ and $Y$ respectively,
then the two inequalities hold for all $p,q\in\bbR$. \end{lemma}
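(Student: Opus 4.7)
The plan is to lift any pair of functions $(\tilde f,\tilde g)$ defined on the ``smaller'' space $\mathcal{U}\times\mathcal{V}$ to a pair $(f,g)$ on $\mathcal{X}\times\mathcal{Y}$ that preserves the inner product but contracts the relevant $L^{p}$ norms. Concretely, given nonnegative $\tilde f:\mathcal{U}\to[0,\infty)$ and $\tilde g:\mathcal{V}\to[0,\infty)$, I would define
\begin{equation}
f(x):=\mathbb{E}[\tilde f(U)\,|\,X=x] \quad \text{and} \quad g(y):=\mathbb{E}[\tilde g(V)\,|\,Y=y].
\end{equation}
The Markov chain $U-X-Y-V$ makes $U$ and $V$ conditionally independent given $(X,Y)$, while additionally $\mathbb{E}[\tilde f(U)\,|\,X,Y]=\mathbb{E}[\tilde f(U)\,|\,X]$ and $\mathbb{E}[\tilde g(V)\,|\,X,Y]=\mathbb{E}[\tilde g(V)\,|\,Y]$. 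Combining these facts by conditioning on $(X,Y)$ yields $\langle f,g\rangle_{\pi_{XY}}=\mathbb{E}[\tilde f(U)\tilde g(V)]=\langle\tilde f,\tilde g\rangle_{\pi_{UV}}$, so the numerator of the Brascamp--Lieb ratio is preserved exactly.

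For the norms I would apply Jensen's inequality according to the sign/size of $p$. When $p\ge 1$, convexity of $t\mapsto t^{p}$ gives $f(x)^{p}=\mathbb{E}[\tilde f(U)\,|\,X=x]^{p}\le \mathbb{E}[\tilde f(U)^{p}\,|\,X=x]$, hence $\|f\|_{p}\le\|\tilde f\|_{p}$, and likewise $\|g\|_{q}\le\|\tilde g\|_{q}$ for $q\ge 1$. Substituting into \eqref{eq:FIFBL-1-1-2-2} yields
\begin{equation}
\overline{C}^{*}_{p,q}(X;Y)\ge\frac{\langle f,g\rangle}{\|f\|_{p}\|g\|_{q}}\ge\frac{\langle\tilde f,\tilde g\rangle}{\|\tilde f\|_{p}\|\tilde g\|_{q}},
\end{equation}
and taking the supremum over $(\tilde f,\tilde g)$ delivers $\overline{C}^{*}_{p,q}(X;Y)\ge\overline{C}^{*}_{p,q}(U;V)$, i.e.\ \eqref{eq:FI-19-2}. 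For the reverse direction with $p,q\le 1$, the key observation is that $t\mapsto t^{p}$ is concave for $p\in(0,1]$ and convex on $(0,\infty)$ for $p<0$; in both regimes Jensen combined with the sign of $1/p$ reverses the earlier norm inequality to $\|f\|_{p}\ge\|\tilde f\|_{p}$, and analogously for $q$. This flips the ratio inequality and, upon infimizing, gives $\underline{C}^{*}_{p,q}(X;Y)\le\underline{C}^{*}_{p,q}(U;V)$, which is \eqref{eq:FI-5}.

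For the ``moreover'' clause, when $U=\phi(X)$ and $V=\psi(Y)$ deterministically, I would instead lift via $f:=\tilde f\circ\phi$ and $g:=\tilde g\circ\psi$. Since $f(X)=\tilde f(U)$ and $g(Y)=\tilde g(V)$ pointwise, both the inner product and \emph{both} norms are preserved \emph{exactly}, so the BL ratio on $(X,Y)$ coincides with the one on $(U,V)$. This holds regardless of the signs of $p,q$, and passing to the sup (respectively inf) over $(\tilde f,\tilde g)$ immediately gives the two inequalities for every $p,q\in\mathbb{R}$ for which the norms are defined (the $p=0$ or $q=0$ endpoints follow by the continuous extension used in \eqref{eq:LpNorm}).

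The only genuinely delicate step is the Jensen bookkeeping in paragraph two: one must track whether the relevant power function is convex or concave and whether $1/p$ is positive or negative, and verify that the pseudo-norm conventions of Section~\ref{sec:holder_HC} behave correctly under conditioning (in particular that $f,g$ inherit positivity $\pi_{X}$- and $\pi_{Y}$-almost everywhere whenever $\tilde f,\tilde g$ are, so that $\|f\|_{p}$ with $p\le 0$ is well defined and nonzero). Once the direction of each Jensen step is pinned down, the argument is an entirely routine application of the tower property plus the Markov hypothesis, and no further machinery beyond Assumption~\ref{asm:full_supp} and Conventions~\ref{conv:FI}--\ref{conv:FI2} is required.
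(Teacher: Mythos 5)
Your proposal is correct and takes essentially the same approach as the paper: lift test functions from $(\mathcal{U},\mathcal{V})$ to $(\mathcal{X},\mathcal{Y})$ via conditional expectation, observe that the Markov property makes the inner product invariant under this lift, and compare norms using Jensen's inequality with the sign of $p$ (and of $1/p$) dictating the direction. The paper's proof is much terser and stops after that observation; you additionally work through the delicate Jensen bookkeeping for $p<0$ and, unlike the paper, explicitly supply the argument for the ``moreover'' clause by lifting via direct composition $f:=\tilde{f}\circ\phi$, $g:=\tilde{g}\circ\psi$, which preserves the inner product and both norms exactly and hence gives both inequalities for all $p,q\in\mathbb{R}$ without any convexity argument --- a step the paper leaves unstated.
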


\begin{proof} For any $f:\mathcal{U}\to[0,\infty)$ and $g:\mathcal{V}\to[0,\infty)$,
let $\hat{f}:x\in\calX\mapsto\mathbb{E}[f(U)\mid X=x]$ and $\hat{g}:y\in\calY\mapsto\mathbb{E}[g(V)\mid Y=y]$.
Then we have $\langle f,g\rangle=\langle\hat{f},\hat{g}\rangle$,
and by Jensen's inequality, $\Vert f\Vert_{p}\ge\Vert\hat{f}\Vert_{p}$
and $\Vert g\Vert_{q}\ge\Vert\hat{g}\Vert_{q}$ for $p,q\ge1$, and
the directions of these two inequalities are reversed for $p,q\le1$. These facts establish~\eqref{eq:FI-19-2}
and~\eqref{eq:FI-5}. 
 \end{proof}

Similarly to the hypercontractivity regions (see Definition~\ref{def:FRHypercontractivityR}
and Lemma~\ref{thm:HC}), the BL exponents also admit rather natural
information-theoretic characterizations. Define the function 
\begin{align}
 & \phi(Q_{X},Q_{Y}) \!:=\!\inf_{R_{X},R_{Y}}\bigg\{\rvD(R_{X},R_{Y}\|\pi_{XY})\!+\!\frac{1}{p}\, D(R_{X}\|Q_{X})\!-\!\frac{1}{p}\, D(R_{X}\|\pi_{X})\nonumber \\*
 & \qquad\qquad\qquad\qquad\qquad\;+\frac{1}{q}\, D(R_{Y}\|Q_{Y})-\frac{1}{q}\, D(R_{Y}\|\pi_{Y})\bigg\},\label{eq:FIphi}
\end{align}
where according to Convention~\ref{conv:FI}, the infimization is
taken over all pairs of distributions $(R_{X},R_{Y}) \in\calP(\calX)\times\calP(\calY)$ such that all
the relative entropies   in the objective function are finite.
Then we have the following information-theoretic characterizations of the forward  and reverse BL exponents.

\begin{proposition} \label{prop:ITcharacterization} For $p,q\in\mathbb{R}\backslash\{0\}$,
if $(X,Y)\sim \pi_{XY}$, then 
\begin{align}
\underline{\Lambda}_{p,q}(X;Y) & =\inf_{Q_{X},Q_{Y}}\phi(Q_{X},Q_{Y})\quad\mbox{and}\label{eq:FILambdaUnderline}\\
\overline{\Lambda}_{p,q}(X;Y) & =\sup_{Q_{X},Q_{Y}}\phi(Q_{X},Q_{Y}).\label{eq:FILambdaOverline}
\end{align}
\end{proposition}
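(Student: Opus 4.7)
The plan is to apply the Donsker--Varadhan variational formula for KL divergence to collapse $\phi(Q_X, Q_Y)$ into a single logarithm that directly matches the BL constants. First I would use the identity
\begin{equation}
\frac{1}{p}\, D(R_X\|Q_X) - \frac{1}{p}\, D(R_X\|\pi_X) = -\frac{1}{p}\, \mathbb{E}_{R_X}\!\!\left[\log\frac{Q_X(X)}{\pi_X(X)}\right],
\end{equation}
and its analogue for $R_Y$. Writing $\rvD(R_X, R_Y\|\pi_{XY})$ as $\min_{R_{XY}\in\calC(R_X,R_Y)} D(R_{XY}\|\pi_{XY})$ and absorbing the outer infimum over $R_X, R_Y$ into the coupling minimization, the three nested infima in the definition of $\phi$ collapse into a single infimum over joint distributions $R_{XY}$:
\begin{equation}
\phi(Q_X, Q_Y) = \inf_{R_{XY}}\Big\{D(R_{XY}\|\pi_{XY}) - \mathbb{E}_{R_{XY}}\!\big[u(X)/p + v(Y)/q\big]\Big\},
\end{equation}
where $u := \log(Q_X/\pi_X)$ and $v := \log(Q_Y/\pi_Y)$.

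Next I would recognize the right-hand side as the Legendre dual of the log moment-generating function. Donsker--Varadhan then yields
\begin{equation}
\phi(Q_X, Q_Y) = -\log \mathbb{E}_{\pi_{XY}}\!\left[\big(Q_X/\pi_X\big)^{1/p}\big(Q_Y/\pi_Y\big)^{1/q}\right] = -\log\langle f_Q, g_Q\rangle,
\end{equation}
where $f_Q := (Q_X/\pi_X)^{1/p}$ and $g_Q := (Q_Y/\pi_Y)^{1/q}$ satisfy $\|f_Q\|_p = \|g_Q\|_q = 1$ by a direct computation ($\|f_Q\|_p^p = \mathbb{E}_{\pi_X}[Q_X/\pi_X] = 1$). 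The crucial observation is that the map $Q_X \mapsto f_Q$ is a bijection between strictly positive PMFs on $\calX$ and strictly positive functions $f$ with $\|f\|_p = 1$ (the inverse being $Q_X = \pi_X f^p$), and likewise for $g_Q$. By homogeneity of the ratio $\langle f, g\rangle/(\|f\|_p\|g\|_q)$, optimizing $\phi$ over $(Q_X, Q_Y)$ is then equivalent to optimizing the BL ratio over strictly positive normalized pairs $(f, g)$, which yields both \eqref{eq:FILambdaUnderline} and \eqref{eq:FILambdaOverline}.

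The main obstacle will be handling the boundary cases in which an optimizing $f$ or $g$ in the definitions of $\overline{C}^*_{p,q}$ and $\underline{C}^*_{p,q}$ vanishes on part of the alphabet, since this corresponds to $Q_X$ or $Q_Y$ having non-full support, rendering $\log(Q_X/\pi_X)$ ill-defined. For the forward case $p, q \ge 1$ I would resolve this by an approximation argument: perturb $f$ to $f + \varepsilon$, apply the bijection within the strictly positive regime, and pass $\varepsilon\downarrow 0$ using the continuity of the $L^p$-norm and the inner product on the finite alphabet. For $p$ or $q$ negative (relevant to the reverse case), Convention~\ref{conv:FI} in the statement implicitly sets the objective to $\pm\infty$ when a relative entropy fails to be finite, which is consistent with $f$ vanishing forcing $D(R_X\|Q_X) = \infty$ for suitable $R_X$. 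A secondary technicality is that the Donsker--Varadhan step requires $\mathbb{E}_{\pi_{XY}}[e^h] < \infty$ and $R_{XY} \ll \pi_{XY}$; on the finite alphabet this is automatic once one restricts to $R_X \ll \pi_X$ and $R_Y \ll \pi_Y$, which the full-support assumption on the marginals guarantees.
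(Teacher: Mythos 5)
Your proposal is correct and is in essence the same proof as the paper's: the paper applies Lemma~\ref{lem:dual} with five measures $P_1,\dots,P_5$ and weights $s_1,\dots,s_5$, which is equivalent to your two-step maneuver of first eliminating the $D(R_X\|Q_X)-D(R_X\|\pi_X)$ and $D(R_Y\|Q_Y)-D(R_Y\|\pi_Y)$ pairs via the algebraic identity and then applying Donsker--Varadhan to the remaining single relative entropy. The bijection $Q_X\mapsto f_Q=(Q_X/\pi_X)^{1/p}$ with $\|f_Q\|_p=1$ is precisely the normalization the paper uses in~\eqref{eq:FI-6}, and your treatment of boundary cases matches the level of care in the published argument.
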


\begin{proof} The proof leverages the following ``duality'' lemma.

\begin{lemma}[Duality of Relative Entropy] \label{lem:dual}Let $\{P_{i}\}_{i=1}^n$ be $n$ probability
mass functions on a finite set $\mathcal{X}$. Let $\{s_{i}\}_{i=1}^n\subset\bbR\setminus\{0\}$
be  nonzero real numbers such that $\sum_{i=1}^{n}s_{i}=1$. Let
$c:\mathcal{X}\to\mathbb{R}$ be a function. Define 
\begin{equation}
\beta:=\sum_{x\in\mathcal{X}}2^{-c(x)}\bigg(\prod_{i=1}^{n}P_{i}(x)^{s_{i}}\bigg).
\end{equation}
Then we have\footnote{We adopt the convention $\inf_{\emptyset}=\infty$, $0\cdot\infty=0$,
and $0^{s}=\infty$ for $s<0$.}
\begin{equation}
-\log\beta=\inf_{Q\ll P_i,\forall\, i\in [n]}\left\{ \sum_{i=1}^{n}s_{i}\, D(Q\|P_{i})+\mathbb{E}_{Q}[c(X)]\right\} .\label{eq:FIduality}
\end{equation}
Moreover, if $0<\beta<\infty$, the infimization in~\eqref{eq:FIduality}
is uniquely attained by the distribution 
\begin{equation}
Q^{*}(x)=\frac{2^{-c(x)}}{\beta}\Big(\prod_{i=1}^{n}P_{i}(x)^{s_{i}}\Big)\quad\mbox{for all}\;\, x\in\calX.
\end{equation}
\end{lemma}

This lemma was stated by \citet{shayevitz2011renyi}. It can be proved
by using Lagrange multipliers. The generalization of this lemma  to arbitrary measurable spaces can be proven by using the nonnegativity of the relative entropy; see \cite[Theorem~2.2.3]{liu2018information} or  \cite{yu2021strong}.

We may assume, by homogeneity, that $\Vert f\Vert_{p}=\Vert g\Vert_{q}=1$. 
Without loss of generality, we may  also assume, due to Assumption~\ref{asm:full_supp}, that $\supp(f)\subset\supp(\pi_{X})$
and $\supp(g)\subset\supp(\pi_{Y})$. Hence, we can write 
\begin{equation}
f(x)^{p}=\frac{Q_{X}(x)}{\pi_{X}(x)}\quad\mbox{and}\quad g(y)^{q}=\frac{Q_{Y}(y)}{\pi_{Y}(y)},\label{eq:FI-6}
\end{equation}
for some probability mass functions $Q_{X}$ and $Q_Y$. 
Moreover, since $f$ and $g$ are finite  on their  supports,
$Q_{X}$ and $ \pi_{X}$ are mutually absolutely continuous if $p<0$, and $Q_{Y}$ and $\pi_{Y}$ are mutually absolutely continuous if $q<0$.  From \eqref{eq:FI-6}, we see that
\begin{align}
\hspace{-.2in}  \langle f,g\rangle =\sum_{(x,y)\in\calX\times\calY} \pi_{XY}(x,y) \Big(\frac{Q_X(x) }{\pi_X(x)}\Big)^{1/p}  \Big(\frac{Q_Y(y)}{\pi_Y(y)} \Big)^{1/q}. \label{eqn:fg}
\end{align}
Now substituting~\eqref{eqn:fg} into the definitions of $\underline{\Lambda}_{p,q}$
and $\overline{\Lambda}_{p,q}$, and using   Lemma~\ref{lem:dual} (with the identifications $c \leftarrow 0$, $s_1\leftarrow 1$, $s_2 \leftarrow  1/p$, $s_3\leftarrow -1/p$, $s_4\leftarrow 1/q$, $s_5\leftarrow -1/q$, and $ P_1\leftarrow \pi_{XY}$, $P_2 \leftarrow  Q_X \pi_{Y|X}$, $P_3 \leftarrow  \pi_{XY}$,  $P_4 \leftarrow  Q_Y \pi_{X|Y}$, $P_5 \leftarrow  \pi_{XY}$), 
we obtain Proposition~\ref{prop:ITcharacterization}. 
\end{proof}

Define the following linear combination of relative entropies 
\begin{equation}
\theta(Q_{X},Q_{Y}):=\rvD(Q_{X},Q_{Y}\|\pi_{XY})-\frac{1}{p}\, D(Q_{X}\|\pi_{X})-\frac{1}{q}\, D(Q_{Y}\|\pi_{Y}).
\end{equation}
By using the tensorization property, the BL exponents also can be
written in the following alternative information-theoretic forms in
terms of variational characterizations of $\theta(Q_{X},Q_{Y})$.

\begin{theorem} \label{thm:ITcharacterization-1} For $p,q\in\mathbb{R}\backslash\{0\}$,
if $(X,Y)\sim \pi_{XY}$, then 
\begin{equation}
\underline{\Lambda}_{p,q}(X;Y)=\begin{cases}
{\displaystyle \inf_{Q_{X},Q_{Y}}\theta(Q_{X},Q_{Y})} & p,q>0\vspace{.03in}\\
{\displaystyle -\infty} & p<0\textrm{ or }q<0
\end{cases}\label{eq:FIInfChLambdaUnderline}
\end{equation}
and 
\begin{equation}
\overline{\Lambda}_{p,q}(X;Y)=\begin{cases}
{\displaystyle \sup_{Q_{X},Q_{Y}}\; \theta(Q_{X},Q_{Y})} & p,q>0\vspace{.03in}\\
{\displaystyle \sup_{Q_{X}}\; \inf_{Q_{Y}}\; \theta(Q_{X},Q_{Y})} & q<0<p\vspace{.03in}\\
{\displaystyle \sup_{Q_{Y}}\; \inf_{Q_{X}}\; \theta(Q_{X},Q_{Y})} & p<0<q\vspace{.03in}\\
{\displaystyle 0} & p,q\!<\!0
\end{cases}.\label{eq:FIInfChLambdaOverline}
\end{equation}
\end{theorem}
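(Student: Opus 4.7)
The plan is to derive the theorem from Proposition~\ref{prop:ITcharacterization}, which supplies $\underline{\Lambda}_{p,q}(X;Y) = \inf_{Q_X, Q_Y} \phi(Q_X,Q_Y)$ and $\overline{\Lambda}_{p,q}(X;Y) = \sup_{Q_X, Q_Y} \phi(Q_X,Q_Y)$. The first step is to rewrite $\phi$ in the equivalent form
\[
\phi(Q_X,Q_Y) = \inf_{R_X, R_Y}\Big\{\theta(R_X, R_Y) + \tfrac{1}{p}\, D(R_X\|Q_X) + \tfrac{1}{q}\, D(R_Y\|Q_Y)\Big\},
\]
which cleanly separates the target functional $\theta$ from two penalty terms whose signs are controlled by those of $p$ and $q$.

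For the forward case, note that $\underline{\Lambda}_{p,q}$ equals the joint infimum of the bracketed expression over $(Q_X,Q_Y,R_X,R_Y)$. When $p,q>0$, both penalties are nonnegative, so for each $(R_X,R_Y)$ the inner infimum over $(Q_X,Q_Y)$ is $0$ (attained at $Q=R$), reducing the joint infimum to $\inf_{R_X,R_Y}\theta$, as required. When $p<0$ (the case $q<0$ is symmetric), $\tfrac{1}{p}D(R_X\|Q_X)\leq 0$ can be driven arbitrarily negative by taking $Q_X$ nearly concentrated on a single point of $\calX$ (while retaining full support, so that the divergence remains finite per Convention~\ref{conv:FI}), yielding infimum $-\infty$.

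For the reverse case with $p,q>0$, the pointwise bound $\phi\leq\theta$ (set $R=Q$ in the inner infimum) gives $\overline{\Lambda}_{p,q}\leq\sup\theta$. For the matching lower bound I would apply the tensorization identity $n\overline{\Lambda}_{p,q}(X;Y)=\overline{\Lambda}_{p,q}(X^n;Y^n)$ from Lemma~\ref{lem:BLtensorization} and test with indicators of type classes, $f=\mathbf{1}_{\calT_{T_X}}$ and $g=\mathbf{1}_{\calT_{T_Y}}$. Standard method-of-types estimates give $\|f\|_p^p\doteq 2^{-nD(T_X\|\pi_X)}$, $\|g\|_q^q\doteq 2^{-nD(T_Y\|\pi_Y)}$, and a two-dimensional Sanov argument gives $\langle f,g\rangle=\pi_{XY}^n(\calT_{T_X}\times\calT_{T_Y})\doteq 2^{-n\rvD(T_X,T_Y\|\pi_{XY})}$, whence $-\tfrac{1}{n}\log(\langle f,g\rangle/(\|f\|_p\|g\|_q))\to\theta(T_X,T_Y)$; density of $n$-types in $\calP(\calX)\times\calP(\calY)$ then yields $\overline{\Lambda}_{p,q}\geq\sup_{Q_X,Q_Y}\theta$. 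For the mixed-sign cases (say $q<0<p$), indicators no longer work since $\|g\|_q=0$; instead I would keep $f=\mathbf{1}_{\calT_{T_X}}$ and take $g(y^n)=(Q_Y^n(y^n)/\pi_Y^n(y^n))^{1/q}$, which normalizes to $\|g\|_q=1$. Single-letterizing the resulting conditional log-moment generating function via Lemma~\ref{lem:dual} produces an exponent of the form $\inf_{R_Y}\theta(T_X,R_Y)+o(1)$; optimizing over the auxiliary $Q_Y$ and over $T_X$ recovers $\sup_{Q_X}\inf_{Q_Y}\theta$. Finally, for $p,q<0$, taking $f=g=1$ shows $\overline{\Lambda}_{p,q}\leq 0$, while the opposite inequality follows by chaining the reverse H\"older inequality $\langle f,g\rangle\geq\|f\|_p\|g\|_{p'}$ (with $p'=p/(p-1)\in(0,1)$) with the Jensen-type monotonicity $\|g\|_q\leq\|g\|_{p'}$ valid for $q<0<p'$.

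The main obstacle will be the mixed-sign reverse case. The nested form $\sup_{Q_X}\inf_{Q_Y}\theta$ is not visible by directly rearranging $\phi$, and a na\"ive minimax exchange fails because the inner expression is convex in $R_X$ but only concave in $R_Y$ when $q<0$, so Sion-type arguments do not apply. The resolution hinges on the tailored non-indicator choice of $g$ described above, together with careful handling of the finite-divergence constraints imposed by Convention~\ref{conv:FI} when penalty terms are driven towards $\pm\infty$.
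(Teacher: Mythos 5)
Your proposal is correct and follows essentially the same route as the paper: the rewrite of $\phi$ as $\inf_R\{\theta(R)+\tfrac{1}{p}D(R_X\|Q_X)+\tfrac{1}{q}D(R_Y\|Q_Y)\}$ is exactly the paper's ``swap the two infima'' step made explicit, the pointwise bound $\phi\leq\theta$ via $R=Q$ is identical, and the tensorization-plus-type-classes-plus-Sanov argument for the reverse case with $p,q>0$ is the paper's proof verbatim. The paper dispatches the remaining sign cases with ``we omit the proofs for other cases, since they are similar,'' so your sketches of the $p<0$ divergence to $-\infty$, the non-indicator $g=(Q_Y^n/\pi_Y^n)^{1/q}$ construction for the mixed-sign lower bound, and the reverse-H\"older chaining for $p,q<0$ are reasonable and consistent fillings-in rather than a departure.
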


For Euclidean spaces, the forward part of this theorem, i.e., \eqref{eq:FIInfChLambdaUnderline},
was derived in \citet{carlen2009subadditivity}. The reverse part
of this theorem, i.e., \eqref{eq:FIInfChLambdaOverline}, for finite
alphabets was derived in \citet{beigi2016equivalent} for all $p,q\neq0$,
and also by \citet{liu2016brascamp} for $p,q>0$. 



The characterizations in \eqref{eq:FIInfChLambdaUnderline} and \eqref{eq:FIInfChLambdaOverline}
are consistent with the ones for the hypercontractivity regions given
in Theorem \ref{thm:HC}. This can be seen observing that $\underline{\Lambda}_{p,q}\ge1$
if and only if $(p,q)\in\mathcal{R}_{\mathrm{FH}}(\pi_{XY})$, and $\underline{\Lambda}_{p,q}\le1$
if and only if $(p,q)\in\mathcal{R}_{\mathrm{RH}}(\pi_{XY})$. Hence,
Theorem \ref{thm:HC} is indeed a consequence of Theorem \ref{thm:ITcharacterization-1}.

\begin{proof}[Proof of Theorem~\ref{thm:ITcharacterization-1}]
The characterization in \eqref{eq:FIInfChLambdaUnderline} follows
directly from~\eqref{eq:FILambdaUnderline} by swapping the two infima.
We now prove the characterization in \eqref{eq:FIInfChLambdaOverline}.
We first consider the case of $p,q>0$. On one hand, by setting $(R_{X},R_{Y})$ in \eqref{eq:FIphi}
to be $(Q_{X},Q_{Y})$, we have that $\phi(Q_{X},Q_{Y})\leq\theta(Q_{X},Q_{Y}).$
Hence, 
\begin{equation}
\overline{\Lambda}_{p,q}(X;Y)\leq\sup_{Q_{X},Q_{Y}}\theta(Q_{X},Q_{Y}).\label{eqn:overLambda_theta}
\end{equation}
On the other hand, by the tensorization property stated in~\eqref{eq:FI-4} in Lemma~\ref{lem:BLtensorization},
for $(X^{n},Y^{n})\sim \pi_{XY}^{n}$, 
\begin{align}
\hspace{-.2in}\overline{\Lambda}_{p,q}(X;Y) & = \frac{1}{n}\overline{\Lambda}_{p,q}(X^{n},Y^{n})\\
 & =\sup_{f,g:
\Vert f\Vert_{p}\Vert g\Vert_{q}>0
}-\frac{1}{n}\log\frac{\langle f,g\rangle}{\Vert f\Vert_{p}\Vert g\Vert_{q}}\\
 & \ge\max_{\mathcal{A}_{n}\subset\mathcal{X}^{n},\mathcal{B}_{n}\subset\mathcal{Y}^{n}}-\frac{1}{n}\log\frac{\pi_{XY}^{n}(\mathcal{A}_{n}\times\mathcal{B}_{n})}{\pi_{X}^{n}(\mathcal{A}_{n})^{1/p}\pi_{Y}^{n}(\mathcal{B}_{n})^{1/q}},\label{eq:FI-8}
\end{align}
where in the last line, we restrict $f$ and $g$ to be the indicators of two non-empty sets
$\mathcal{A}_{n}\subset\mathcal{X}^{n}$ and $\mathcal{B}_{n}\subset\mathcal{Y}^{n}$,
respectively. 

To further lower bound~\eqref{eq:FI-8}, we take $(\mathcal{A}_{n},\mathcal{B}_{n})$
therein to be a pair of type classes $(\mathcal{T}_{T_{X}^{(n)}},\mathcal{T}_{T_{Y}^{(n)}})$ in which the sequence  of pairs of types
$\{(T_{X}^{(n)},T_{X}^{(n)})\}_{n\in\bbN}$ converges to some pair of distributions $(Q_X, Q_Y)$ as $n\to\infty$.
Then, by Sanov's theorem (see Theorem~\ref{thm:sanov}),
\begin{equation}
\lim_{n\to\infty}-\frac{1}{n}\log\frac{\pi_{XY}^{n}\big(\mathcal{T}_{T_{X}^{(n)}}\times\mathcal{T}_{T_{Y}^{(n)}}\big)}{\pi_{X}^{n}\big(\mathcal{T}_{T_{X}^{(n)}}\big)^{1/p}\pi_{Y}^{n}\big(\mathcal{T}_{T_{Y}^{(n)}}\big)^{1/q}}=\theta(Q_{X},Q_{Y}). \label{eqn:sanov_in_prf}
\end{equation}
Hence,  we obtain $\overline{\Lambda}_{p,q}(X;Y)\ge\theta(Q_{X},Q_{Y}).$
Since $(Q_{X},Q_{Y})$ is arbitrary, we have $\overline{\Lambda}_{p,q}(X;Y)\ge\sup_{Q_{X},Q_{Y}}\theta(Q_{X},Q_{Y})$.
Therefore, \eqref{eq:FIInfChLambdaOverline}  holds.

We omit the proofs for other cases, since they are similar to the
above argument. \end{proof}

An interesting observation arising from this proof is the following.
For $p,q>0$, by combining~\eqref{eqn:overLambda_theta} and~\eqref{eq:FI-8}, we obtain
\begin{align}
 & \max_{\mathcal{A}_{n}\subset\calX^{n},\mathcal{B}_{n}\subset\calY^{n}}-\frac{1}{n}\log \pi_{XY}^{n}(\mathcal{A}_{n}\times\mathcal{B}_{n})+\frac{1}{np}\log \pi_{X}^{n}(\mathcal{A}_{n})+\frac{1}{nq}\log \pi_{Y}^{n}(\mathcal{B}_{n})\nonumber \\*
 & \hspace{4cm}\le\sup_{Q_{X},Q_{Y}}\theta(Q_{X},Q_{Y}).\label{eq:FIBLRSSE}
\end{align}
Furthermore, as shown in~\eqref{eqn:sanov_in_prf}, by appealing to Sanov's
theorem, this inequality is {\em asymptotically tight}
(which means that as $n\to\infty$, the limits of the left- and right-hand sides
are equal). Similarly, for $p,q>0$, one can observe that 
\begin{align}
 & \min_{\mathcal{A}_{n}\subset\calX^{n},\mathcal{B}_{n}\subset\calY^{n}}-\frac{1}{n}\log \pi_{XY}^{n}(\mathcal{A}_{n}\times\mathcal{B}_{n})+\frac{1}{np}\log \pi_{X}^{n}(\mathcal{A}_{n})+\frac{1}{nq}\log \pi_{Y}^{n}(\mathcal{B}_{n})\nonumber \\*
 & \hspace{4cm}\ge\inf_{Q_{X},Q_{Y}}\theta(Q_{X},Q_{Y}),\label{eq:FIBLFSSE}
\end{align}
and this inequality is also asymptotically tight by Sanov's theorem~\cite{Dembo}.

As a consequence of~\eqref{eq:FIBLRSSE} and~\eqref{eq:FIBLFSSE}, we find that
certain sequences of $\{0,1\}$-valued functions attain the BL exponents.
Hence, a BL inequality holds for all nonnegative functions if and
only if any of its  multi-dimensional extensions hold for any $\{0,1\}$-valued
functions.  In addition to the set of $\{0,1\}$-valued functions, one can
also use the following construction of functions to assert the (asymptotic) optimality
of a BL inequality. We can first identify a optimal pair $(f^{*},g^{*})$
for the one-dimensional case. By the tensorization property of the
BL exponents (Theorem~\ref{lem:BLtensorization}), the $n$-fold
product of $(f^{*},g^{*})$ also constitutes an optimal pair that allows
us to assert the optimality of a BL inequality. In contrast, the asymptotic
optimality of $\{0,1\}$-valued functions is advantageous in
our quest to prove the strong SSE theorem (Theorem~\ref{thm:strongsse-2}) as will
be done in Section~\ref{sec:NICD-Stability}. 

\subsection{Single-Function Versions}

\label{subsec:Single-Function-Version} The BL inequalities discussed
in Section~\ref{sec:BL_ineq} involve {\em two} nonnegative functions.
In the literature, there exist {\em single-function} versions
of BL inequalities and they have been shown to be equivalent to their
two-function counterparts (as was discussed in the context of the DSBS in Section~\ref{sec:ab_half}).  We now introduce the single-function versions of BL inequalities. First recall from~\eqref{eqn:noise_op} that the {\em conditional
expectation operator} induced by $\pi_{X|Y}$ is the operator that maps  a  function $f:\mathcal{X}\to\mathbb{R}$ to  the  function 
\begin{equation}
y\in \calY \mapsto \pi_{X|Y=y}(f):=\bbE\big[f(X)\,\big|\, Y=y\big]=\sum_{x\in\mathcal{X}}\pi_{X|Y}(x|y)f(x).\label{eqn:cond_expectation_fi}
\end{equation}
 Then, given a joint distribution $\pi_{XY}$ and two real
numbers $p$ and $q$, for any nonnegative function $f:\mathcal{X}\to[0,\infty)$, the single-function versions
of the BL inequalities read
\begin{align}
\Vert \pi_{X|Y}(f)\Vert_{q} & \le\overline{C} \, \Vert f\Vert_{p}\quad\mbox{and}\label{eq:FIFBL-2}\\
\Vert \pi_{X|Y}(f)\Vert_{q} & \geq\underline{C} \, \Vert f\Vert_{p}\label{eq:FIRBL-3}
\end{align}
for some constants $\overline{C}$ and $\underline{C}$. 

 We remark that~\eqref{eq:FIFBL-2} and~\eqref{eq:FIRBL-3}
are in fact equivalent to the {\em strong
data processing inequalities} for the R\'enyi divergence  \cite{raginsky2016strong}.  The latter concerns the tradeoff between
 $D_{p}(Q_{X}\|\pi_{X})$ and $D_{q}(Q_{Y}\|\pi_{Y})$,
where $Q_Y$ represents the output distribution induced by the input distribution $Q_{X}$ and the stochastic kernel  $\pi_{Y|X}$, i.e., $Q_X\rightarrow \pi_{Y|X}\rightarrow Q_Y$. 
The equivalence follows since  we can
set $f={Q_{X}}/{\pi_{X}}$ and observe that
\begin{align}
\log\Vert f\Vert_{p} & =\frac{1}{p'}\, D_{p}(Q_{X}\|\pi_{X}) \label{eqn:strong_renyi_dpi}
\end{align}
and 
\begin{align}
\log\Vert \pi_{X|Y}(f)\Vert_{q} & =\frac{1}{q}\, \log\sum_{y\in \calY}\bigg(\sum_{x\in \calX}\frac{Q_{X}(x)}{\pi_{X}(x)}\pi_{X|Y}(x|y)\bigg)^{q}\pi_{Y}(y)\\
 & =\frac{1}{q}\, \log\sum_{y\in \calY}\Big(\frac{Q_{Y}(y)}{\pi_{Y}(y)}\Big)^{q}\pi_{Y}(y)\\*
 & =\frac{1}{q'}\,  D_{q}(Q_{Y}\|\pi_{Y}).
\end{align}
For more details, see the papers by \citet{raginsky2016strong} and \citet{yu2021strong}.

The promised equivalence between the single- and two-function versions of the BL inequalities is formalized in the following proposition. 



\begin{proposition} \label{prop:equivalence} Inequality~\eqref{eq:FIFBL-2}
for $q\ge1$ holds if and only if~\eqref{eq:FIFBL}
holds but with $q$ in the latter replaced by its H\"older conjugate
$q'=\frac{q}{q-1}$. Similarly, inequality~\eqref{eq:FIRBL-3}
for $q\le1$ holds if and only if~\eqref{eq:FIRBL}
holds but with $q$ in the latter replaced by its H\"older conjugate
$q'$. \end{proposition}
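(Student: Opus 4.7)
The proposition asserts the equivalence of the single-function and two-function formulations of the BL inequalities. My plan is to prove each equivalence via a standard Hölder-duality argument: one direction is a ``marginalize then Hölder'' step, while the reverse direction relies on the classical trick of choosing the optimizer in Hölder's inequality as the test function.

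\emph{Single $\Rightarrow$ Two (forward case, $q \ge 1$).} First, I would condition on $Y$ and apply the (forward) Hölder inequality in the $Y$-variable. Specifically, for any nonnegative $f,g$,
\begin{align}
\langle f,g\rangle
 = \mathbb{E}\big[\pi_{X|Y}(f)(Y)\, g(Y)\big]
 \le \|\pi_{X|Y}(f)\|_{q}\, \|g\|_{q'},
\end{align}
where $q'$ is the Hölder conjugate of $q$. If the single-function inequality~\eqref{eq:FIFBL-2} holds, then substituting $\|\pi_{X|Y}(f)\|_{q}\le \overline{C}\|f\|_p$ into the display yields~\eqref{eq:FIFBL} with $q$ replaced by $q'$. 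The reverse case proceeds identically using the reverse Hölder inequality~\eqref{eq:FIRHolder}, valid because for $q\le 1$ the (pseudo-)norm satisfies $\mathbb{E}[hg]\ge\|h\|_q\|g\|_{q'}$ for nonnegative $h,g$.

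\emph{Two $\Rightarrow$ Single (forward case, $q\ge 1$).} The strategy is to choose $g$ to be the function that achieves equality in the Hölder step above, namely $g(y):=(\pi_{X|Y}(f)(y))^{q-1}$. Under this choice, a direct computation (using the tower property $\mathbb{E}[f(X)h(Y)]=\mathbb{E}[\pi_{X|Y}(f)(Y)h(Y)]$) gives
\begin{equation}
\langle f,g\rangle=\|\pi_{X|Y}(f)\|_{q}^{q},\qquad \|g\|_{q'}=\|\pi_{X|Y}(f)\|_{q}^{q/q'},
\end{equation}
where the second identity uses $(q-1)q'=q$. Substituting these into the two-function inequality $\langle f,g\rangle\le \overline{C}\|f\|_p\|g\|_{q'}$ and dividing by $\|\pi_{X|Y}(f)\|_{q}^{q/q'}$ (assuming it is finite and nonzero, handling trivial cases separately) leaves $\|\pi_{X|Y}(f)\|_{q}^{q-q/q'}\le \overline{C}\|f\|_p$; since $q-q/q'=1$, this is exactly~\eqref{eq:FIFBL-2}. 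The reverse case is analogous, with the same choice of $g$ and reverse Hölder, to deduce~\eqref{eq:FIRBL-3} from~\eqref{eq:FIRBL}.

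\emph{Main obstacles.} The computational content is routine; the care lies in handling boundary/degenerate cases. I would need to separately treat (i) $q=1$, where $q'\in\{\pm\infty\}$ and the $L^{q'}$ norm becomes a supremum/infimum, (ii) $q<0$ or $p<0$, where the pseudo-norms can vanish when $\pi_{X|Y}(f)$ or $f$ is not strictly positive on the support of $\pi_{X}$ (per Convention~\ref{conv:FI2}, one may restrict to $f$ bounded away from $0$ by approximation), and (iii) the case where $\|\pi_{X|Y}(f)\|_q=0$ or $\infty$, which makes the division step in Two~$\Rightarrow$~Single invalid but where the claimed inequality is either trivial or obtained by a limiting argument. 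Finally, the existence and nonnegativity of $g(y)=(\pi_{X|Y}(f)(y))^{q-1}$ must be verified in each regime, which is straightforward since $\pi_{X|Y}(f)\ge 0$ whenever $f\ge 0$.
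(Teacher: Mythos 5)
Your proof is correct and is essentially the same as the paper's: the paper compresses your two directions into the single variational identity $\|\hat g\|_q=\sup_g \langle\hat g,g\rangle/\|g\|_{q'}$ (for $q\ge1$, with $\inf$ for $q\le1$), which is exactly the Hölder inequality (your ``Single $\Rightarrow$ Two'' step) together with its achiever $g=\hat g^{\,q-1}$ (your ``Two $\Rightarrow$ Single'' step). Your explicit treatment of the boundary cases $q=1$ and $q\le0$ matches what the paper glosses over.
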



%

\begin{proof}By 
H\"older's inequality, for any $\hat{g}:\mathcal{Y}\to[0,\infty)$,
it holds that 
\begin{equation}
\Vert\hat{g}\Vert_{q}=\begin{cases}
{\displaystyle \sup_{g :\Vert g\Vert_{q'}>0}\frac{\langle\hat{g},g\rangle}{\Vert g\Vert_{q'}}} & q\ge1\vspace{.03in}\\
{\displaystyle \inf_{g :\Vert g\Vert_{q'}>0}\frac{\langle\hat{g},g\rangle}{\Vert g\Vert_{q'}}} & q\le1
\end{cases} ,
\end{equation}
where $1'=\infty$ and $1'=-\infty$ for the first and second clauses
respectively. Setting $\hat{g}$ to be $\pi_{X|Y}(f)$, we obtain the
following equivalences: For $q\ge1$, 
\begin{align}
 & \sup_{f :\Vert f\Vert_{p}>0}\frac{\Vert \pi_{X|Y}(f)\Vert_{q}}{\Vert f\Vert_{p}} =  \sup_{(f,g) :\Vert f\Vert_{p}>0 , \Vert g\Vert_{q'}>0}\;\frac{\langle f,g\rangle}{\Vert f\Vert_{p}\Vert g\Vert_{q'}},\label{eq:FI-38}
\end{align}
and  for $q\le1$, 
\begin{align}
 & \inf_{f :\Vert f\Vert_{p}>0}\frac{\Vert \pi_{X|Y}(f)\Vert_{q}}{\Vert f\Vert_{p}}=  \inf_{(f,g): \Vert f\Vert_{p}>0, \Vert g\Vert_{q'}>0}\;\frac{\langle f,g\rangle}{\Vert f\Vert_{p}\Vert g\Vert_{q'}}.\label{eq:FI-38-2}
\end{align}
By the equivalence in \eqref{eq:FI-38}, for $q\ge1$, the single-function version of
BL inequality in~\eqref{eq:FIFBL-2} is equivalent to the two-function
version in~\eqref{eq:FIFBL} with $\overline{C}$ and $p$ unchanged
but with $q$ replaced by its H\"older conjugate~$q'$.
Similarly, for $q\le1$, by the equivalence in \eqref{eq:FI-38-2}, the single-function version of BL inequality
in~\eqref{eq:FIRBL-3} is equivalent to the two-function version
in~\eqref{eq:FIRBL} with $\underline{C}$ and $p$ unchanged but
with $q$ replaced by $q'$. \end{proof} 

\section{Connections to the NICD Problem and   $q$-Stability }
\label{sec:NICD-Stability} 

As observed in the proof of Theorem~\ref{thm:ITcharacterization-1},
certain sequences of $\{0,1\}$-valued functions attain the BL exponents.
We now provide a detailed discussion on this observation. We also
discuss the connections between the BL exponents and the NICD problem
(Section~\ref{ch:NICD}), as well as the   $q$-stability problem
(Section~\ref{ch:Stability}).

\enlargethispage{-2\baselineskip}
Recall the general version of the strong SSE theorem (Theorem~\ref{thm:strongsse-2})   and the  general version of the strong $q$-stability
theorem  (Theorem~\ref{thm:strongqstability}).
For the LD exponents $\underline{\Upsilon}_{\mathrm{LD}}^{(n)}$ and $\overline{\Upsilon}_{\mathrm{LD}}^{(n)}$
defined in \eqref{eq:FI-72} and \eqref{eq:FI-71}, the strong SSE
theorem states that for $\pi_{XY}$ defined on a finite alphabet, any $n\ge1$,
$\alpha\in(0,\alpha_{\max}(\pi_{X})]$, and $\beta\in(0,\beta_{\max}(\pi_{Y})]$,
it holds that 
\begin{align}
\underline{\Upsilon}_{\mathrm{LD}}^{(n)}(\alpha,\beta) & \ge\bbL[\underline{\Upsilon}_{\mathrm{LD}}](\alpha,\beta)\quad\mbox{and}\label{eq:NICDLD_FI}\\
\overline{\Upsilon}_{\mathrm{LD}}^{(n)}(\alpha,\beta) & \leq\bbU[\overline{\Upsilon}_{\mathrm{LD}}](\alpha,\beta).\label{eq:NICDLD2_FI}
\end{align}
Moreover, the inequalities
in \eqref{eq:NICDLD_FI} and \eqref{eq:NICDLD2_FI} are   asymptotically
tight in the limit as $n\to\infty$.  We now provide a proof of
the strong SSE theorem by leveraging its connections to the information-theoretic
characterizations of BL exponents.

\begin{proof}[Proof of  Theorem~\ref{thm:strongsse-2}] Observe
that \eqref{eq:FIBLRSSE} and \eqref{eq:FIBLFSSE} for $p,q>0$ can
be rewritten as follows. For all $\mathcal{A}_{n}\subset\mathcal{X}^{n}$
and $\mathcal{B}_{n}\subset\mathcal{Y}^{n}$, 
\begin{align}
 & -\frac{1}{n}\log \pi_{XY}^{n}(\mathcal{A}_{n}\times\mathcal{B}_{n})+\frac{1}{np}\log \pi_{X}^{n}(\mathcal{A}_{n})+\frac{1}{nq}\log \pi_{Y}^{n}(\mathcal{B}_{n})\nonumber \\
 & \qquad\ge\inf_{s,t\ge0} \underline{\varphi} (s,t)-\frac{s}{p}-\frac{t}{q}\ge\inf_{s,t\ge0}\underline{\Upsilon}(s,t)-\frac{s}{p}-\frac{t}{q},\label{eq:FIBLFSSE-1}
\end{align}
where $\underline{\varphi}$ and $\underline{\Upsilon}$ are defined in \eqref{eqn:def_varphi}
and \eqref{eq:FIlce} respectively. Analogously, for all $\mathcal{A}_{n}\subset\mathcal{X}^{n}$
and $\mathcal{B}_{n}\subset\mathcal{Y}^{n}$, 
\begin{align}
 & -\frac{1}{n}\log \pi_{XY}^{n}(\mathcal{A}_{n}\times\mathcal{B}_{n})+\frac{1}{np}\log \pi_{X}^{n}(\mathcal{A}_{n})+\frac{1}{nq}\log \pi_{Y}^{n}(\mathcal{B}_{n})\nonumber \\
 & \qquad\le\sup_{s,t\ge0}\overline{\varphi}(s,t)-\frac{s}{p}-\frac{t}{q}\le\sup_{s,t\ge0}\overline{\Upsilon}(s,t)-\frac{s}{p}-\frac{t}{q},\label{eq:FIBLRSSE-1}
\end{align}
where $\overline{\varphi}$ and $\overline{\Upsilon}$ are defined in \eqref{eqn:def_psi}
and \eqref{eq:FIuce} respectively. For any $(\mathcal{A}_{n},\mathcal{B}_{n})$,
set $a:=-\frac{1}{n}\log \pi_{X}^{n}(\mathcal{A}_{n})$ and $b:=-\frac{1}{n}\log \pi_{Y}^{n}(\mathcal{B}_{n})$.
Let $(u,v)$ be a subgradient\footnote{Let $\calI\subset\bbR^d$ be convex. A vector $\bg\in\bbR^d$ is a {\em subgradient} of $f:\calI\to\bbR$ at $\bx\in\calI$ if for all $\bz\in\calI$, $f(\bz)\ge f(\bx)+\langle\bg,\bz-\bx\rangle$.}  of $\underline{\Upsilon}$ at $(a,b)$. Since $\underline{\Upsilon}$ is convex and {\em nondecreasing}, $u,v\ge0$.
Hence, by definition of the subgradient, 
\begin{equation}
\inf_{s,t\ge0}\underline{\Upsilon}(s,t)-us-vt=\underline{\Upsilon}(a,b)-ua-vb. \label{eq:FI_subgrad_equality}
\end{equation}
Substituting $p=1/u$ and $q=1/v$ into \eqref{eq:FIBLFSSE-1} and utilizing  \eqref{eq:FI_subgrad_equality}, we
have 
\begin{equation}
-\frac{1}{n}\log \pi_{XY}^{n}(\mathcal{A}_{n}\times\mathcal{B}_{n})\ge\underline{\Upsilon}(a,b).
\end{equation}
Similarly, by using \eqref{eq:FIBLRSSE-1}, we have 
\begin{equation}
-\frac{1}{n}\log \pi_{XY}^{n}(\mathcal{A}_{n}\times\mathcal{B}_{n})\leq\overline{\Upsilon}(a,b).
\end{equation}
Hence, 
\begin{align}
\underline{\Upsilon}^{(n)}(\alpha,\beta) & \ge\min_{a\ge\alpha,b\ge\beta}\underline{\Upsilon}(a,b)=\underline{\Upsilon}(\alpha,\beta)\quad\mbox{and}\\
\overline{\Upsilon}^{(n)}(\alpha,\beta) & \le\max_{a\le\alpha,b\le\beta}\overline{\Upsilon}(a,b)=\overline{\Upsilon}(\alpha,\beta).
\end{align}
Finally, the asymptotic tightness of \eqref{eq:NICDLD} and \eqref{eq:NICDLD2}
can be verified by appealing to Sanov's theorem (Theorem~\ref{thm:sanov}). \end{proof}

The strong SSE theorem (Theorem \ref{thm:strongsse-2}) can be further strengthened if
the {\em exact} values of the marginal probabilities are given, instead
of  only bounds as in the definitions of $\underline{\Upsilon}^{(n)}$ and
$\overline{\Upsilon}^{(n)}$. It has been shown in \cite{yu2021strong}
that for all $\mathcal{A}_{n}\subset\mathcal{X}^{n}$ and $\mathcal{B}_{n}\subset\mathcal{Y}^{n}$,
\begin{equation}
\underline{\Upsilon}(a,b)\le-\frac{1}{n}\log \pi_{XY}^{n}(\mathcal{A}_{n}\times\mathcal{B}_{n})\leq\mathbb{U}[\overline{\varphi}](a,b), \label{eqn:strengthened_Upsilon}
\end{equation}
where $a=-\frac{1}{n}\log \pi_{X}^{n}(\mathcal{A}_{n})$,  $b=-\frac{1}{n}\log \pi_{Y}^{n}(\mathcal{B}_{n})$, and $\overline{\varphi}$ was defined in~\eqref{eqn:def_psi}.
Moreover,   the lower and upper bounds in~\eqref{eqn:strengthened_Upsilon} are asymptotically
tight as $n\to\infty$. 

A similar relation can be found between the single-function version
of the BL exponents and the notion of $q$-stability discussed in
Section~\ref{sec:form_q_stability}.  Following steps similar to
the proof for the strong SSE theorem, one can
prove the strong $q$-stability theorem (Theorem~\ref{thm:strongqstability}).
We omit the details here. Furthermore, similarly to the strong SSE theorem, the strong $q$-stability theorem
can be further strengthened if the marginal probabilities are specified.
For details, see \cite{yu2021strong}. 

\section{Logarithmic Sobolev Inequalities}
\label{sec:log_sobolev}

We discuss the  {\em logarithmic Sobolev} (or {\em log-Sobolev}) {\em inequalities} in 
this section. It will be seen (from Theorem~\ref{thm:LSI-HC}) that such inequalities
 turn out to be {\em equivalent}, in sense to be made precise,
to the hypercontractivity inequalities (cf.\ Section~\ref{sec:holder_HC}).
We will also focus on information-theoretic characterizations of certain
log-Sobolev inequalities. For more details on the classical
aspects of this rich topic, the reader is referred to \citet{RagSason}
and \citet{Ledoux_book}. The results in this section serve as
important elements of the proofs of the main results in Section~\ref{sec:stronger} 
in which the classic hypercontractivity inequalities are strengthened.
This section thus forms a bridge between the classic hypercontractivity
inequalities and their strengthened versions. 


\subsection{Preliminaries on Dirichlet forms and Entropies}
Let $\mathcal{X}=\mathcal{Y}$. As assumed, $\mathcal{X}$ is a finite set.  
Let   $\bL$ be a $|\calX|\times|\calX|$
  matrix (a linear operator acting on real-valued functions defined on $\calX$) such that $L_{x,y}\ge0$ for $x\neq y$ and $\sum_{y\in\mathcal{X}}L_{x,y}=0$
for all $x$. Let $T_{t}:=\rme^{t\bL}$  ($t\ge0$) be a matrix  induced by $\bL$, where $\rme^{\bA}$
denotes the {\em matrix exponential} of~$\bA$. 
The operator $T_{t}$ is known as a  {\em Markov operator}, which is one that sends a real-valued function on  $\mathcal{X}$ to another real-valued function on  $\mathcal{X}$.
In addition, $\{T_{t}\}_{t\ge0}$ forms
a {\em Markov semigroup}, since it  satisfies the semigroup property, namely  that
$T_{t+s}=T_{t}T_{s}=T_{s}T_{t}$ for all $s,t\ge0$. 
For more details
on Markov operators and Markov semigroups, the reader is referred
to \citet{rudnicki2002markov,bakry2013analysis}.

Let $\pi$ be a stationary distribution corresponding to $\{T_{t}\}_{t\ge0}$,
i.e., $\pi=\pi T_{t}$ for all $t\ge0$ or, equivalently, $\pi\bL=\bzero$. We can regard $\pi$ and $T_t$ (for a fixed $t\ge0$) as corresponding to $\pi_Y$ and $\pi_{X|Y}$ respectively. 
As such, the $y^{\mathrm{th}}$ row of the matrix $T_{t}$ is $\pi_{X|Y}(\cdot|y)$.
As usual, denote  the inner product for two real-valued functions $f$ and $g$ defined on $\mathcal{X}$ as $\langle f,g\rangle_{\pi}:=\mathbb{E}_{\pi}[fg]=\sum_{x \in \calX}\pi(x)f(x)g(x)$.
\begin{definition}
The {\em Dirichlet form} of $\{T_{t}\}_{t\ge0}$ is 
\begin{equation}
\mathcal{E}(f,g):=-\sum_{(x,y )\in\calX^2}L_{x,y}f(y)g(x)\pi(x)=-\langle\bL f,g\rangle_{\pi},\label{eqn:dirichlet_form}
\end{equation}
where $(\bL f)(x):=\sum_{y\in\calX}L_{x,y}f(y)$. The {\em normalized
Dirichlet form} of $\{T_{t}\}_{t\ge0}$ is
\begin{equation}
\overline{\mathcal{E}}(f,g):=\frac{\mathcal{E}(f,g)}{\langle f,g\rangle_{\pi}}.
\end{equation}
\end{definition}
We now extend the definitions of the Dirichlet form and its normalized
version to the $n$-dimensional Cartesian product space $\mathcal{X}^{n}$. Let $T_{t}^{\otimes n}$
be the product semigroup on $\mathcal{X}^{n}$ induced by $T_{t}$.
Recall from Section~\ref{sec:influ}, that given a vector $x^{n}\in\calX^{n}$, let $x^{\setminus k}:=(x_{1},\ldots,x_{k-1},x_{k+1},\ldots,x_{n})\in\calX^{n-1}$
be the subvector of $x^{n}$ with the $k^{\mathrm{th}}$ coordinate
removed. For two real-valued functions $f$ and $g$ defined on $\mathcal{X}^{n}$,
let 
\begin{equation}
\psi(x^{\setminus k}):=\mathcal{E}\big(f(x^{\setminus k},\cdot),g(x^{\setminus k},\cdot)\big)
\end{equation}
be the action of the Dirichlet form $\mathcal{E}$ on the $k^{\mathrm{th}}$
coordinates of $f$ and~$g$ with other coordinates held fixed. Then,
the Dirichlet form of $f$ and~$g$ and its normalized version are
respectively given by 
\begin{align}
\mathcal{E}_{n}(f,g) & :=\sum_{k=1}^{n}\sum_{x^{\setminus k}\in\mathcal{X}^{n-1}}\psi(x^{\setminus k})\prod_{j \in [n]\setminus \{k\} }\pi(x_{j})\quad\mbox{and}\label{eq:FImaten}\\
\overline{\mathcal{E}}_{n}(f,g) & :=\frac{\mathcal{E}_{n}(f,g)}{\langle f,g\rangle_{\pi^{n}}}.\label{eqn:norm_df}
\end{align}

In addition to the Dirichlet form, the other quantity involved in log-Sobolev inequalities is the \emph{entropy} of a nonnegative function~$f$. 
\begin{definition}
For a nonnegative function $f$, the \emph{entropy} and the \emph{normalized
entropy }of $f$ are respectively defined as 
\begin{align}
\Ent(f) & :=\mathbb{E}_{\pi}[f\ln f]-\mathbb{E}_{\pi}[f]\ln\mathbb{E}_{\pi}[f]\quad\mbox{and}\quad \overline{\Ent}(f)  :=\frac{\Ent(f)}{\mathbb{E}_{\pi}[f]}.\label{eq:FInormalizedentropy}
\end{align}
\end{definition}
Note that these notions of entropy and normalized entropy are commonly encountered in functional analysis; see, for example, \citet{Ledoux_book}. They are related
to, but not the same as the  Shannon entropy in classical information
theory. Indeed, they bear more similarity  to the relative entropy,  in
the sense that if $f$ is the Radon--Nikodym derivative  ${\rmd Q}/{\rmd\pi}$ of a distribution $Q$ with respect to
$\pi$ (i.e., the function $x\in\calX\mapsto{Q(x)}/{\pi(x)}$ for the finite alphabet case),
then the entropy (and also the normalized entropy) of $f$ is equal
to the relative entropy of $Q$ from $\pi$, i.e., $D(Q\|\pi)$. 
By Jensen's inequality, both the entropy and the normalized entropy
are nonnegative. 

\subsection{Log-Sobolev Inequalities and Their Properties}

The {\em log-Sobolev} inequalities  quantify the
relation between the Dirichlet form of a Markov semigroup for an arbitrary
nonnegative function $f$ and a composite function $g = \varphi \circ f$ for some  given $\varphi:[0,\infty)\to [0,\infty)$,  and the entropy
of $f$. For $p \in\bbR\setminus \{0,1\}$,  let 
\begin{equation}
c_{p}:=\frac{p^{2}}{4(p-1)}.
\end{equation}
Following the definitions in \citet{mossel2013reverse}, we
define log-Sobolev inequalities as follows.

\begin{definition} For $p\in\mathbb{R}\backslash\{0,1\}$, the {\em
$p$-log-Sobolev inequality with constant $C$} is 
\begin{equation}
\Ent(f^{p})\le C\,  c_{p}\, \mathcal{E}(f,f^{p-1})\label{eq:FIpLSI}
\end{equation}
for   nonnegative $f$ if $p>1$ and for   positive~$f$ if $p<1$.For $p=1$, the {\em $1$-log-Sobolev inequality with constant $C$}
 for    positive~$f$ is 
\begin{equation}
\Ent(f)\le\frac{C}{4}\, \mathcal{E}(f,\ln f).\label{eq:FI1LSI}
\end{equation}
For $p=0$, the {\em $0$-log-Sobolev inequality with constant $C$}
 for    positive~$f$ is
\begin{equation}
\mathrm{Var}(\ln f)\le-\frac{C}{2}\, \mathcal{E}\Big(f,\frac{1}{f}\Big).\label{eq:FI0LSI}
\end{equation}
\end{definition} 

The cases corresponding to $p=0$ and $p=1$ of the $p$-log-Sobolev
inequality are the limiting cases of the $p$-log-Sobolev inequality
for $p\in\mathbb{R}\backslash\{0,1\}$ with the same constant $C$.

We now connect the log-Sobolev inequality and the classic hypercontractivity
inequalities in \eqref{eq:FIFBL-2} and~\eqref{eq:FIRBL-3} with
$\overline{C}$ and $\underline{C}$ set to~$1$. Indeed, we will see from Theorem~\ref{thm:LSI-HC} that the log-Sobolev
inequalities are {\em differential versions} of the hypercontractivity
inequalities evaluated at $t=0$.
This theorem is a classical result due to Gross
\cite{gross1975logarithmic}, and various proofs can be found in \cite{gross1975logarithmic, bakry1994hypercontractivite, ane2000inegalites, bakry2004functional,mossel2013reverse}.

Here we provide a short self-contained proof.

\begin{theorem}[Differential relationship between log-Sobolev and hypercontractivity inequalities] \label{thm:LSI-HC} Let $C$ be a positive constant.
Let $q:[0,\infty)\to\bbR$ be defined as 
\begin{align}
q(t)=1+(p-1)\, \rme^{4t/C}.\label{eqn:def_qt}
\end{align}
\begin{enumerate}
\item[(a)] \label{itm:one} Fix $p>1$. If for any $r\in[p,\infty)$, the $r$-log Sobolev inequality is satisfied with constant $C$, then for any $t>0$,
\begin{align}
\Vert T_{t}f\Vert_{q(t)} & \le\Vert f\Vert_{p}\quad\mbox{for all}\;\, f\ge0,\label{eq:FIFHC-1}
\end{align}
where $(T_{t}f)(x)=\sum_{y}T_{t}(x,y)f(y)$.
\item[(b)] \label{itm:two} Fix $p<1$.  If for any $r\in(-\infty,p]$, the $r$-log-Sobolev
inequality is satisfied with constant $C$, then for any $t>0$, 
\begin{align}
\Vert T_{t}f\Vert_{q(t)} & \ge\Vert f\Vert_{p}\quad\mbox{for all} \;\,  f\ge0.\label{eq:FIRHC-1}
\end{align}
\item[(c)] \label{itm:three} Conversely, if \eqref{eq:FIFHC-1} holds for $p>1$
or \eqref{eq:FIRHC-1} holds for $p<1$, then the $p$-log-Sobolev
inequality is satisfied with constant $C$. 
\end{enumerate}
\end{theorem}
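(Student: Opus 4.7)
The plan is to use Gross's classical derivative method: introduce the one-parameter family $\Phi(t) := \|T_t f\|_{q(t)}$ and show that the differential inequality $\Phi'(t) \le 0$ (resp.\ $\Phi'(t) \ge 0$) at each $t$ is equivalent to the $q(t)$-log-Sobolev inequality applied to $u(t) := T_t f$. First, I would compute $\frac{d}{dt}\ln\Phi(t)$. Using the semigroup generator $\partial_t u = \bL u$, the identity $\calE(f,g) = -\langle \bL f, g\rangle_\pi$, and the chain rule applied to $\frac{1}{q(t)} \ln \mathbb{E}_\pi[u^{q(t)}]$, a direct calculation gives
\begin{equation}
q(t)^2\, \mathbb{E}_\pi[u^{q(t)}]\, \frac{d}{dt}\ln \Phi(t) = q'(t)\, \Ent(u^{q(t)}) - q(t)^2\, \calE(u, u^{q(t)-1}).
\end{equation}
The choice $q(t) = 1 + (p-1)\rme^{4t/C}$ is engineered precisely so that $q'(t) = \frac{4(q(t)-1)}{C}$, whence the right-hand side is non-positive if and only if $\Ent(u^{q}) \le \frac{Cq^{2}}{4(q-1)}\, \calE(u, u^{q-1}) = C c_q\, \calE(u, u^{q-1})$, i.e.\ the $q$-log-Sobolev inequality with constant $C$ holds at parameter $q = q(t)$ for the function $u = T_t f$.

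For part (a), note that $q(\cdot)$ is an increasing bijection of $[0,\infty)$ onto $[p,\infty)$ when $p > 1$. The hypothesis that $r$-log-Sobolev holds with constant $C$ for every $r \in [p,\infty)$ therefore forces $\Phi'(t) \le 0$ for every $t \ge 0$, so $\Phi(t) \le \Phi(0) = \|f\|_p$, which is exactly \eqref{eq:FIFHC-1}. For part (b), the same derivative formula applies but now $q(\cdot)$ parametrises an interval of $(-\infty,1]$ as $t$ grows: when $p \in (0,1)$, $q(t)$ decreases from $p$ through $0$ toward $-\infty$, and when $p < 0$, $q(t)$ stays in $(-\infty, 1)$. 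A careful bookkeeping of signs (in particular, $q - 1 < 0$ so $c_q > 0$ still, while the pseudo-norm $\|\cdot\|_{q}$ inverts monotonicity when $q$ crosses zero) shows that the same differential identity, combined with the $r$-log-Sobolev inequalities for $r \in (-\infty,p]$, yields $\Phi'(t) \ge 0$, hence $\Phi(t) \ge \Phi(0) = \|f\|_p$, which is \eqref{eq:FIRHC-1}. The limiting cases $p = 1$ and $p = 0$ are handled by passing to the limit in the derivative identity, which matches the definitions in \eqref{eq:FI1LSI} and \eqref{eq:FI0LSI}.

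For the converse (c), hypercontractivity \eqref{eq:FIFHC-1} for $p > 1$ asserts $\Phi(t) \le \Phi(0)$ for all $t \ge 0$, so the right derivative satisfies $\frac{d}{dt}\Phi(t)\big|_{t=0^{+}} \le 0$. Evaluating the boxed derivative identity at $t = 0$ (where $u = f$ and $q = p$) immediately yields
\begin{equation}
\frac{4(p-1)}{C}\, \Ent(f^p) \le p^2\, \calE(f, f^{p-1}),
\end{equation}
which is the $p$-log-Sobolev inequality \eqref{eq:FIpLSI} with constant $C$. The reverse case $p < 1$ is symmetric: \eqref{eq:FIRHC-1} gives $\Phi'(0^{+}) \ge 0$, and the sign of $q'(0) = \frac{4(p-1)}{C}$ (now negative) flips to produce the same inequality.

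The main technical obstacle will be part (b): the function $t \mapsto \Phi(t)$ involves pseudo-norms for $q < 1$, and when $q(t)$ passes through $0$ or becomes negative, one must justify differentiating under the expectation and verify that the monotonicity conclusion $\Phi'(t) \ge 0$ correctly implies $\Phi(t) \ge \Phi(0)$ in the sense required by \eqref{eq:FIRHC-1}. A secondary subtlety is that the argument tacitly requires $T_t f > 0$ to handle negative powers, which for $p < 1$ is ensured by assuming $f > 0$ as stated, together with the fact that the Markov semigroup $T_t$ preserves positivity. Everything else amounts to chain rule, the identity $\calE(f,g) = -\langle \bL f, g\rangle_\pi$, and the algebraic coincidence $q'(t) = \frac{4(q(t)-1)}{C}$ that was built into the definition of $q(t)$.
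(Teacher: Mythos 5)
Your proposal is correct and is essentially the paper's argument (Gross's derivative method): the paper works with $\zeta(t,s)=\ln\|T_t f\|_{1/s}$ and substitutes $s=\xi(t)=1/q(t)$ via the chain rule, which reduces exactly to your direct computation of $\frac{d}{dt}\ln\Phi(t)$; you also correctly identify the built-in identity $q'(t)=\tfrac{4(q(t)-1)}{C}$ (which the paper restates, somewhat opaquely, as $\xi'(t)\ge -1/(Cc_{q(t)})$ even though it is an equality). The technical caveats you flag for part (b) — positivity of $T_t f$, behavior of the pseudo-norm and differentiability near $q(t)=0$ when $p\in(0,1)$, and the sign of $\calE(u,u^{q-1})$ and of $c_q$ for $q<1$ — are precisely the details the paper's proof sketch also leaves implicit.
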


The inequalities in \eqref{eq:FIFHC-1} and \eqref{eq:FIRHC-1} are
respectively equivalent to the fact that\footnote{Here $q(t)'$ denotes the the H\"older conjugate of $q(t)$. In contrast,
we use $q'(t)$ to denote the derivative of $q$ evaluated at $t$. } $(p,q(t)')$ belongs to the forward and reverse hypercontractivity
regions of the joint distributions (Definition~\ref{def:FRHypercontractivityR}) induced by $(T_{t},\pi)$ for any $t>0$. In fact, the  relations between
the BL inequalities and generalized $p$-log-Sobolev inequalities
can also be established. For details, the reader is referred to \cite{gross1975logarithmic,bakry1994hypercontractivite,ane2000inegalites,bakry2004functional,mossel2013reverse}.

\begin{proof}[Proof Sketch of Theorem~\ref{thm:LSI-HC}]  We first prove Statement (a) in which we
assume that $p>1$. Define the function $\zeta:[0,\infty)^{2}\to\bbR$
as 
\begin{equation}
\zeta(t,s):=\ln\|T_{t}f\|_{\frac{1}{s}}\,.
\end{equation}
Then, one can check by direct differentiation that 
\begin{align}
\hspace{-.3in}\frac{\partial\zeta}{\partial s}  =-\overline{\Ent}\big((T_{t}f)^{\frac{1}{s}}\big)\quad\mbox{and}\quad\frac{\partial\zeta}{\partial t}=-\overline{\mathcal{E}}_{n}\big(T_{t}f,(T_{t}f)^{\frac{1}{s}-1}\big).\label{eq:FIhca_2-2b}
\end{align}
We define $\xi(t):=1/q(t)$, and hence, $\xi(0)=1/p$ (refer to \eqref{eqn:def_qt}).
Therefore, by \eqref{eq:FIhca_2-2b} and the chain rule, 
\begin{equation}
\frac{\mathrm{d}}{\mathrm{d}t}\zeta(t,\xi(t))=-\overline{\Ent}\Big((T_{t}f)^{\frac{1}{\xi(t)}}\Big)\xi'(t)-\overline{\mathcal{E}}_{n}\Big(T_{t}f,(T_{t}f)^{\frac{1}{\xi(t)}-1}\Big).\label{eq:FI-12}
\end{equation}
Observe that 
\begin{equation}
\xi'(t)=\frac{-4(p-1)\rme^{4t/C}}{C\, q(t)^{2}}.
\end{equation}
It also holds that $\xi'(t)\ge -{1}/{(C\, c_{q(t)} )}$
for all $t\ge0$. Combining this with~\eqref{eq:FI-12} yields 
\begin{equation}
\frac{\mathrm{d}}{\mathrm{d}t}\zeta(t,\xi(t))\le\overline{\Ent}\big((T_{t}f)^{\frac{1}{\xi(t)}}\big)\frac{1}{C\, c_{q(t)}}-\overline{\mathcal{E}}_{n}\big(T_{t}f,(T_{t}f)^{\frac{1}{\xi(t)}-1}\big).\label{eq:FI-14}
\end{equation}
On the other hand, by assumption, for any $r\in[p,\infty)$, the $r$-log-Sobolev
inequality holds, i.e., 
\begin{equation}
\overline{\Ent}(g^{r})\le  C\,  c_r \,  \overline{\mathcal{E}}_{n}(g,g^{r-1})\quad\mbox{for all} \;\, g\ge0.\label{eq:FI-11-2}
\end{equation}
Substituting $g$ and $r$ for $T_{t}f$ and $q(t)=1/\xi(t)$ respectively
into~\eqref{eq:FI-11-2} and combining the resultant inequality with~\eqref{eq:FI-14}
yields
\begin{equation}
\frac{\mathrm{d}}{\mathrm{d}t}\zeta(t,\xi(t))\le0\quad\mbox{for all} \;\, t\ge0.\label{eq:FI-11-2-1}
\end{equation}
Finally, by integrating both sides of \eqref{eq:FI-11-2-1} from $0$
to $t$, we have 
\begin{equation}
\ln\|T_{t}f\|_{\frac{1}{\xi(t)}}-\ln\|f\|_{\frac{1}{\xi(0)}}\le0,
\end{equation}
which is precisely the hypercontractivity inequality in~\eqref{eq:FIFHC-1}.

Statement~(b) follows analogously but the directions of the inequalities above are reversed. Statement~(c) follows by first differentiating
the hypercontractivity inequalities and evaluating them at $t=0$.  Then, we
can recover the $p$-log-Sobolev inequalities from them.  \end{proof}



It is also well-known (see, for example, \citet{Ledoux_book}) that
the $p$-log-Sobolev inequality satisfies the {\em tensorization
property}. 

\begin{proposition} \label{prop:tensorization} If a certain $p$-log-Sobolev
inequality with constant~$C$ holds for $(T_{t},\pi)$, then the
$p$-log-Sobolev inequality with the {\em same} constant holds for
the product semigroup $(T_{t}^{\otimes n},\pi^{n})$. \end{proposition}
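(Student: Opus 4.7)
The plan is to establish tensorization via the classical route of combining (i) the subadditivity of the entropy functional $\mathrm{Ent}$ across product measures with (ii) coordinate-wise application of the one-dimensional $p$-log-Sobolev inequality. Concretely, I will first reduce the $n$-dimensional inequality to a sum of $n$ one-dimensional inequalities, each applied with the other $n-1$ coordinates frozen, and then average these using the product measure $\pi^n$.

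The first step is a subadditivity lemma: for any nonnegative $g:\calX^n\to [0,\infty)$ (with appropriate positivity when $p<1$),
\begin{equation}
\mathrm{Ent}_{\pi^n}(g)\le \sum_{k=1}^n\sum_{x^{\setminus k}\in\calX^{n-1}}\mathrm{Ent}_\pi\bigl(g(x^{\setminus k},\cdot)\bigr)\prod_{j\in[n]\setminus\{k\}}\pi(x_j).
\end{equation}
This is a standard fact that follows by writing $\mathrm{Ent}_{\pi^n}(g)$ as a relative entropy (with respect to the product measure) and then applying the chain rule for KL divergence together with the convexity of the map $u\mapsto u\ln u$; I would invoke it without reproving it, pointing to \citet{Ledoux_book}. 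Applying this with $g=f^p$ (for $p>1$; the case $p<1$ and the $p\in\{0,1\}$ limits are handled analogously with reversed inequality directions where relevant) reduces the problem to bounding each inner entropy.

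The second step is to apply the hypothesized one-dimensional $p$-log-Sobolev inequality slicewise. For each fixed $k\in[n]$ and each $x^{\setminus k}\in\calX^{n-1}$, the function $\varphi(\cdot):=f(x^{\setminus k},\cdot):\calX\to [0,\infty)$ satisfies, by assumption,
\begin{equation}
\mathrm{Ent}_\pi(\varphi^p)\le C\, c_p\, \calE(\varphi,\varphi^{p-1}).
\end{equation}
Substituting this bound into each summand from the subadditivity step, and then recognizing the resulting double sum as exactly the definition of $\calE_n(f,f^{p-1})$ in \eqref{eq:FImaten}, yields
\begin{equation}
\mathrm{Ent}_{\pi^n}(f^p)\le C\, c_p\sum_{k=1}^n\sum_{x^{\setminus k}}\calE\bigl(f(x^{\setminus k},\cdot),f^{p-1}(x^{\setminus k},\cdot)\bigr)\!\prod_{j\ne k}\!\pi(x_j)=C\, c_p\, \calE_n(f,f^{p-1}),
\end{equation}
which is the desired $p$-log-Sobolev inequality for $(T_t^{\otimes n},\pi^n)$ with the same constant~$C$. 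For the boundary cases $p=1$ and $p=0$, the argument is identical in structure: subadditivity of $\mathrm{Ent}$ (for $p=1$) or of $\mathrm{Var}(\ln\cdot)$ (for $p=0$) reduces the $n$-dimensional inequality to $n$ one-dimensional ones, each of which is the hypothesis.

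The main obstacle, and the only nontrivial ingredient, is the subadditivity of $\mathrm{Ent}$ (and, for the $p=0$ case, of $\mathrm{Var}(\ln\cdot)$) under product measures. Everything else is bookkeeping: once subadditivity is in hand, the key feature of the Dirichlet form $\calE_n$ defined in \eqref{eq:FImaten}---namely that it is itself an additive sum of one-dimensional Dirichlet forms averaged over the remaining coordinates---makes the matching between the left- and right-hand sides immediate. Since this subadditivity is a standard result in the analysis of Markov semigroups, the proposition follows with no further complications.
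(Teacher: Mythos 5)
Your proof is correct. It differs in presentation, though the core inequality you use is the same as the paper's when unpacked. You invoke the subadditivity (tensorization) of $\mathrm{Ent}$ over product measures as a black-box lemma, apply the one-dimensional $p$-log-Sobolev inequality slicewise, and match the result against the additive definition of $\mathcal{E}_n$ in \eqref{eq:FImaten}. The paper instead first recasts the optimal log-Sobolev constant $C_{p,n}^*$ as a supremum of a ratio of information-theoretic quantities (Lemma~\ref{lem:optconstIT}), writes $f^p/\mathbb{E}[f^p]=Q_{X^n}/\pi^n$, and then uses the chain rule plus the data-processing inequality for relative entropy together with an explicit random index $K\sim\mathrm{Unif}[n]$ and auxiliary variable $U=(X^{\setminus K},K)$ to single-letterize and conclude $C_{p,n}^*\le C_{p,1}^*$. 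If you unwind the definitions, your subadditivity step applied to $g=\rmd Q/\rmd\pi^n$ is \emph{exactly} the paper's bound
\begin{equation}
D(Q_{X^n}\|\pi^n)\le\sum_{k=1}^n D\big(Q_{X_k|X^{\setminus k}}\big\|\pi\big|Q_{X^{\setminus k}}\big),
\end{equation}
so the two arguments invoke the same inequality. What each route buys: yours is shorter, more classical, and requires no auxiliary random variables; the paper's optimal-constant framing is deliberately chosen because the same machinery (the characterization in Lemma~\ref{lem:optconstIT} together with the single-letterizing variables $K,U$) is reused almost verbatim to prove the \emph{nonlinear} log-Sobolev strengthening in Theorem~\ref{thm:strongSLI}, where a mere subadditivity argument would not suffice. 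One minor quibble: your remark about ``reversed inequality directions where relevant'' for $p<1$ is a red herring---the log-Sobolev inequality \eqref{eq:FIpLSI} and the subadditivity of $\mathrm{Ent}$ both point the same way for all $p$, so no reversal is needed; for $p<1$ both $c_p$ and $\mathcal{E}(f,f^{p-1})$ change sign together. The $p=0$ case, where you need Efron--Stein-type subadditivity of $\mathrm{Var}(\ln\cdot)$ rather than of $\mathrm{Ent}$, is handled correctly.
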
 

\begin{proof} We provide an information-theoretic proof for
this proposition. We start by characterizing 
the optimal constant in the $p$-log-Sobolev inequality in terms of information-theoretic quantities. For the
product semigroup $(T_{t}^{\otimes n},\pi^{n})$, the optimal constant is
for $p\in\mathbb{R}\backslash\{0,1\}$ 
\begin{equation}
C_{p,n}^*:=\sup_{f  \,:\, c_{p}\, \mathcal{E}_{n}(f,f^{p-1})>0}\;\frac{\Ent(f^{p})}{c_{p}\, \mathcal{E}_{n}(f,f^{p-1})}.\label{eq:FIpLSI-2}
\end{equation}
Since $(T_{t},\pi)$ is a special case of $(T_{t}^{\otimes n},\pi^{n})$
with $n$ set to $1$,    $C_{p,1}^*$ is the optimal
constant for the semigroup $(T_{t},\pi)$. 

For a given $Q_{X^{n}}$, define the $k^{\mathrm{th}}$ ``likelihood
ratio'' 
\begin{equation}
\ell_{k}(y,x^{\setminus k}):=\frac{Q_{X_{k}|X^{\setminus k}}(y|x^{\setminus k})}{\pi(y)}\quad\mbox{for all}\;\, (y,x^{\setminus k})\in\calX\times\calX^{n-1}.\label{eqn:zeta_k}
\end{equation}
If we write ${f^{p}}/{\mathbb{E}[f^{p}]}={Q_{X^{n}}}/{\pi^{n}}$
for a distribution $Q_{X^{n}}\ll\pi^{n}$, then 
\begin{align}
\Ent(f^{p}) & =D(Q_{X^{n}}\|\pi^{n})\quad\mbox{and}\label{eq:FI-3}\\
\mathcal{E}_{n}(f,f^{p-1}) & =\sum_{k=1}^{n}\mathbb{E}_{\pi^{n-1}}\big[\eta(X^{\setminus k})\big],\label{eq:FI-7}
\end{align}
where 
\begin{align}
\hspace{-0.15in}\eta(x^{\setminus k}):=-\frac{Q_{X^{\setminus k}}(x^{\setminus k})}{\pi^{n-1}(x^{\setminus k})}\sum_{x,y}\ L_{x,y}\big(\ell_{k}(y,x^{\setminus k})\big)^{1/p}\ \big(\ell_{k}(x,x^{\setminus k})\big)^{1/p'} \ \pi(x)\label{eqn:def_g} .
\end{align}
Uniting \eqref{eq:FI-3} and \eqref{eq:FI-7}, one can obtain the following
information-theoretic characterization of~$C_{p,n}^*$.

\begin{lemma} \label{lem:optconstIT} For $n\in\bbN$
and $p\in\mathbb{R}\backslash\{0,1\}$,  it holds that 
\begin{equation}
C_{p,n}^*=\sup_{Q_{X^{n}}\,:\,c_{p} \,  \sum_{k=1}^{n}\mathbb{E}_{\pi^{n-1}}[\eta(X^{\setminus k})]>0}\;\;\frac{D(Q_{X^{n}}\|\pi^{n})}{c_{p}\, \sum_{k=1}^{n}\mathbb{E}_{\pi^{n-1}}[\eta(X^{\setminus k})]}.\label{eq:FIoptimalconstantLSI}
\end{equation}
\end{lemma}

Continuing the proof of Proposition~\ref{prop:tensorization}, we notice that, on one hand, by the data processing inequality for the relative entropy,
we have 
\begin{align}
D(Q_{X^{n}}\|\pi^{n}) & =\sum_{k=1}^{n}D\big(Q_{X_{k}|X^{k-1}}\big\|\pi \big| Q_{X^{k-1}}\big)\\
 & \le\sum_{k=1}^{n}D\big(Q_{X_{k}|X^{\setminus k}}\big\|\pi \big|   Q_{X^{\setminus k}}\big)\\
 & =nD\big(Q_{X_{K}|X^{\setminus K}K}\big\|\pi  \big|  Q_{X^{\setminus K}|K}Q_{K}\big)\\
 & =nD\big(Q_{X|U}\big\|\pi \big|  Q_{U}\big),\label{eqn:div_sl}
\end{align}
where $K\sim Q_{K}:=\mathrm{Unif}[n]$ is independent of $X^{n}$
and $U:=(X^{\setminus K},K)$. On the other hand, using the definitions
of $\eta$ and $\ell_{k}$, consider, 
\begin{align}
 & \hspace{-0.1in}\sum_{k=1}^{n}\bbE_{\pi^{n-1}}[\eta(X^{\setminus k})]\nn\\*
 & \hspace{-0.1in}=-\sum_{k=1}^{n}\bbE_{Q_{X^{\setminus k}}}\bigg[\sum_{x,y}L_{x,y}\big(\ell_{k}(y,X^{\setminus k})\big)^{1/p}\big(\ell_{k}(x,X^{\setminus k})\big)^{1/p'}\pi(x)\bigg]\\
 & \hspace{-0.1in}=-n\bbE_{Q_{K}}\Bigg[\bbE_{Q_{X^{\setminus K}}}\bigg[\sum_{x,y}\! L_{x,y}\big(\ell_{K}(y,X^{\setminus K})\big)^{1/p}\big(\ell_{K}(x,X^{\setminus K})\big)^{1/p'}\pi(x)\bigg|K\bigg]\Bigg]\!\!\label{eqn:use_K}\\
 & \hspace{-0.1in}=n\sum_{u}Q_{U}(u)\mathcal{E}\left(\Big(\frac{Q_{X|U=u}}{\pi}\Big)^{1/p},\Big(\frac{Q_{X|U=u}}{\pi}\Big)^{1/p'}\right),\label{eqn:use_calE}
\end{align}
where the penultimate equality follows from the uniformity of $K$ and the final equality
follows from the definition of the Dirichlet form in~\eqref{eqn:dirichlet_form}
and that of $U=(X^{\setminus K},K)$. From  \eqref{eqn:div_sl}
and \eqref{eqn:use_calE}, we conclude that the objective function
in \eqref{eq:FIoptimalconstantLSI} satisfies 
\begin{align}
& \frac{D(Q_{X^{n}}\|\pi^{n})}{c_{p}\, \sum_{k=1}^{n}\mathbb{E}_{\pi^{n-1}}\big[\eta(X^{\setminus k})\big]} \nn \\
& \quad\le\frac{D(Q_{X|U}\|\pi|Q_{U})}{c_{p}\, \sum_{u}Q_{U}(u)\mathcal{E}\Big(\big(\frac{Q_{X|U=u}}{\pi}\big)^{1/p},\big(\frac{Q_{X|U=u}}{\pi}\big)^{1/p'}\Big)}\\
 &\quad \le\max_{u}\frac{D(Q_{X|U=u}\|\pi)}{c_{p}\, \mathcal{E}\Big(\big(\frac{Q_{X|U=u}}{\pi}\big)^{1/p},\big(\frac{Q_{X|U=u}}{\pi}\big)^{1/p'}\Big)} \label{eq:FI_max_n}\\
 &\quad \le C_{p,1}^*, \label{eq:FI_max_n_1}
\end{align}
where in \eqref{eq:FI_max_n}, the maximum is over all $u$ in the alphabet  of $U$ (i.e., $\calX^{n-1}\times [n]$) such that the denominator is positive, and  \eqref{eq:FI_max_n_1}  follows from Lemma~\ref{lem:optconstIT} (with
$n$ set to $1$). Therefore, $C_{p,n}^*\le C_{p,1}^*$ for
any $n$ and $p\in\mathbb{R}\backslash\{0,1\}$. 

On the other hand, setting $Q_{X^{n}}$ to be a product distribution
$Q_{X}^{n}$ in Lemma \ref{lem:optconstIT}, we   find that $C_{p,n}^*\ge C_{p,1}^*$ for all $n\in\bbN$.
Combining these two bounds yields $C_{p,n}^* = C_{p,1}^*$ 
for $p\in\mathbb{R}\backslash\{0,1\}$. By taking limits in $p$ (toward $0$ and $1$), one  
deduces that $C_{p,n}^*\ge C_{p,1}^*$  for $p\in\{0,1\}$. Hence,
the tensorizaton property holds. \end{proof}

The information-theoretic method employed in the proof of Proposition~\ref{prop:tensorization}
can be also used to study certain {\em nonlinear} versions of log-Sobolev
inequalities. 
These inequalities were proposed as a topic for research by \citet{kalai1995distance}  in 1995. However,  there was no progress for over twenty years  since the initial proposal of these inequalities until 
 recent works by \citet{samorodnitsky2008modified}, \citet{samorodnitsky2016entropy},
and \citet{polyanskiy2019improved}. 


Note that \eqref{eq:FIpLSI} delineates a certain {\em linear} relationship between  the  entropy $\Ent(f^{p})$ and the Dirichlet form $\mathcal{E}(f,f^{p-1})$. 
For $\alpha\ge0$ and $n\in\bbN$,  let 
\begin{equation}
\mathcal{F}_{\alpha}^{(n)}:=\big\{ f:c_{p}\ \overline{\mathcal{E}}_{n}(f,f^{p-1})=n\alpha\big\}.
\end{equation}
To study the \emph{nonlinear} tradeoff between the normalized Dirichlet form and the normalized entropy, we define the {\em  log-Sobolev function} as 
\begin{equation}
\Xi_{p}(\alpha):=\sup_{f\in\mathcal{F}_{\alpha}^{(1)}}\overline{\Ent}(f^{p}).\label{eq:FIpLSI-1-1}
\end{equation}
Extending the definition of $\Xi_{p}(\alpha)$ from $T_{t}$ to $T_{t}^{\otimes n}$,
we define 
\begin{equation}
\Xi_{p}^{(n)}(\alpha):=\sup_{f\in\mathcal{F}_{\alpha}^{(n)}}\frac{1}{n}\overline{\Ent}(f^{p}).\label{eqn:def_Xin}
\end{equation}
It would be useful to provide a  tight dimension-independent
bound for $\Xi_{p}^{(n)}(\alpha)$. As mentioned in Section~\ref{sec:balanced}, by {\em dimension-independent}, we
mean that the bound on $\Xi_{p}^{(n)}(\alpha)$ does not depend on
$n$; in information theory parlance, this is known as a {\em single-letter}
bound. A  tight dimension-independent
bound for $\Xi_{p}^{(n)}(\alpha)$ was shown by \citet{polyanskiy2019improved}
in the following theorem.

\begin{theorem} \label{thm:strongSLI} It holds that for $p\in\mathbb{R}\backslash\{0,1\}$,
\begin{equation}
\Xi_{p}^{(n)}(\alpha)\leq\mathbb{U}[\Xi_{p}](\alpha).\label{eq:FI-strongerlogSobolev}
\end{equation}
Moreover, this upper bound is asymptotically tight as $n\to\infty$,
which means that 
\begin{equation}
\lim_{n\to\infty}\Xi_{p}^{(n)}(\alpha)=\mathbb{U}[\Xi_{p}](\alpha).\label{eqn:asympt_tight_U}
\end{equation}
If additionally, $\Xi_{p}$ in \eqref{eq:FIpLSI-1-1} is concave, then the upper bound in~\eqref{eq:FI-strongerlogSobolev} is also
tight for all finite $n\ge1$. \end{theorem}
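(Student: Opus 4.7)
The plan is to mirror the information-theoretic proof of the tensorization of log-Sobolev inequalities (Proposition~\ref{prop:tensorization}) and replace the linear relation $\Ent\le C\cdot c_p\,\mathcal{E}$ by the graph of the nonlinear function $\Xi_p$. Given $f\in\mathcal{F}_\alpha^{(n)}$, first normalize and write $f^p/\bbE[f^p]=Q_{X^n}/\pi^n$. Then, exactly as in \eqref{eq:FI-3}--\eqref{eq:FI-7}, the normalized entropy and the normalized Dirichlet form reduce to
\begin{equation}
\overline{\Ent}(f^p)=D(Q_{X^n}\|\pi^n),\qquad \overline{\mathcal{E}}_n(f,f^{p-1})=\sum_{k=1}^n\bbE_{\pi^{n-1}}\!\left[\eta(X^{\setminus k})\right].
\end{equation}
Introduce $K\sim\mathrm{Unif}[n]$ independent of $X^n\sim Q_{X^n}$ and set $U:=(X^{\setminus K},K)$. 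By the chain rule and the data-processing inequality for relative entropy, $\frac{1}{n}D(Q_{X^n}\|\pi^n)\le D(Q_{X|U}\|\pi\mid Q_U)$, and by the same rearrangement as in \eqref{eqn:use_K}--\eqref{eqn:use_calE}, $\sum_k\bbE_{\pi^{n-1}}[\eta(X^{\setminus k})]=n\sum_u Q_U(u)\,\mathcal{E}(g_u,g_u^{p-1})$, where $g_u:=(Q_{X|U=u}/\pi)^{1/p}$.

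Set $\alpha_u:=c_p\,\overline{\mathcal{E}}(g_u,g_u^{p-1})$. Since $\langle g_u,g_u^{p-1}\rangle_\pi=1$, the two displays above combined with the constraint $c_p\,\overline{\mathcal{E}}_n(f,f^{p-1})=n\alpha$ give $\sum_u Q_U(u)\alpha_u=\alpha$, while $\overline{\Ent}(g_u^p)=D(Q_{X|U=u}\|\pi)$. The single-letter bound $g_u\in\mathcal{F}_{\alpha_u}^{(1)}$ yields $\overline{\Ent}(g_u^p)\le\Xi_p(\alpha_u)\le\mathbb{U}[\Xi_p](\alpha_u)$, whence by Jensen's inequality for the concave function $\mathbb{U}[\Xi_p]$,
\begin{equation}
\tfrac{1}{n}\overline{\Ent}(f^p)\le\sum_u Q_U(u)\,\mathbb{U}[\Xi_p](\alpha_u)\le\mathbb{U}[\Xi_p]\!\Big(\!\sum_u Q_U(u)\alpha_u\Big)=\mathbb{U}[\Xi_p](\alpha),
\end{equation}
which establishes \eqref{eq:FI-strongerlogSobolev}.

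For the asymptotic tightness \eqref{eqn:asympt_tight_U}, I will use a Carath\'eodory time-sharing construction. Fix $\alpha$ and pick weights $\lambda_1,\lambda_2\ge 0$ and points $\alpha_1,\alpha_2$ with $\lambda_1+\lambda_2=1$, $\lambda_1\alpha_1+\lambda_2\alpha_2=\alpha$, and $\lambda_1\Xi_p(\alpha_1)+\lambda_2\Xi_p(\alpha_2)=\mathbb{U}[\Xi_p](\alpha)$. Choose near-optimizers $f_j$ for $\Xi_p(\alpha_j)$ and form the tensor product $f(x^n)=\prod_j\prod_{i\in I_j}f_j(x_i)$ over a partition $I_1\sqcup I_2=[n]$ with $|I_j|\approx \lambda_j n$. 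A short direct computation shows both $\overline{\Ent}$ and $\overline{\mathcal{E}}_n$ tensorize additively on products, giving $\tfrac{1}{n}\overline{\Ent}(f^p)\to\sum_j\lambda_j\overline{\Ent}(f_j^p)$ and $c_p\overline{\mathcal{E}}_n(f,f^{p-1})/n\to\sum_j\lambda_j\alpha_j=\alpha$; small perturbations of the $f_j$ or block sizes will enforce the equality constraint exactly. Finally, the concave case $\Xi_p=\mathbb{U}[\Xi_p]$ follows immediately: the upper bound gives $\Xi_p^{(n)}\le\Xi_p$, while choosing $m=1$ and $f_1=f$ in the above construction (i.e.\ a pure $n$-fold product of a single optimizer) yields the matching lower bound for every finite $n$.

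The main obstacle is the time-sharing step: while the product structure makes both the entropy and the Dirichlet form tensorize cleanly (up to ratios involving $\bbE[f^p]$, which cancel in the normalized quantities), meeting the constraint $c_p\,\overline{\mathcal{E}}_n=n\alpha$ \emph{exactly} rather than approximately requires either a continuity argument for $\Xi_p$ or a perturbative adjustment. The other subtlety is that, in the converse direction, the data-processing step replaces $Q_{X^n}$ with the conditional $Q_{X_K|X^{\setminus K}K}$, so one must verify that the induced $\alpha_u$'s remain in the domain of $\Xi_p$; this is automatic here because $\alpha_u\ge 0$ and $\mathbb{U}[\Xi_p]$ is defined on the closure of $\mathrm{dom}\,\Xi_p$.
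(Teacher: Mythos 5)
Your proof is correct and follows essentially the same route as the paper's: you reproduce the tensorization argument of Proposition~\ref{prop:tensorization} (chain rule plus data processing on the entropy side, the rearrangement \eqref{eqn:use_K}--\eqref{eqn:use_calE} on the Dirichlet-form side), then pass to the single-letter bound and close the argument via the Carath\'eodory representation of $\mathbb{U}[\Xi_p]$. The only cosmetic difference is in the final step: the paper writes the resulting bound as a supremum $\sup_{Q_{XU}}D(Q_{X|U}\|\pi\mid Q_U)$ over couplings with $c_p\,\bbE_{Q_U}[\mathcal{E}(\cdot,\cdot)]=\alpha$ and identifies this directly with $\mathbb{U}[\Xi_p](\alpha)$, whereas you split it into $\overline{\Ent}(g_u^p)\le\Xi_p(\alpha_u)\le\mathbb{U}[\Xi_p](\alpha_u)$ and apply Jensen's inequality to the concave envelope; these two formulations are equivalent, since the supremum over couplings is exactly the Carath\'eodory-convexification that defines $\mathbb{U}[\Xi_p]$. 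Your tightness and concave-case arguments fill in the details that the paper only gestures at (``time-sharing'', cf.~after Theorem~\ref{thm:strongsse}), and your closing caveat about enforcing the Dirichlet-form equality constraint exactly is a legitimate gap that the paper also leaves implicit.
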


This theorem is a strengthening of the (linear) $p$-log-Sobolev inequality
in \eqref{eq:FIpLSI} with optimal constant $C_{p}^{*}:=C_{p,1}^*$ given
in \eqref{eq:FIoptimalconstantLSI}.  Here we set  $n$ in \eqref{eq:FIoptimalconstantLSI}   to $1$ since the tensorization property holds.  Since the function
$\mathbb{U}[\Xi_{p}]$ is nonlinear in general, the inequality in \eqref{eq:FI-strongerlogSobolev}
is known as the \emph{nonlinear $p$-log-Sobolev inequality}. 

To appreciate
the relation between the linear and nonlinear $p$-log-Sobolev inequalities,  
one can demonstrate  that the optimal constant $C_{p}^{*}$ in the
(linear) $p$-log-Sobolev inequality in \eqref{eq:FIpLSI} is  
 the right-derivative of $\mathbb{U}[\Xi_{p}](\alpha)$ at $\alpha=0$ if $\Xi_{p}(0)=0$.
If $\Xi_{p}(0)>0$, then  the  linear  $p$-log-Sobolev inequality
in \eqref{eq:FIpLSI} does not hold for any finite $C$. 

\begin{proof}[Proof of Theorem \ref{thm:strongSLI}]  We follow
the same steps as in  the proof of Lemma~\ref{prop:tensorization} up to~\eqref{eqn:use_calE}. Then, combining these steps with the definition of $\Xi_{p}^{(n)}(\alpha)$ in~\eqref{eqn:def_Xin},
we find that 
\begin{align}
\Xi_{p}^{(n)}(\alpha) & \le \sup_{Q_{XU}\,:\, c_{p}\, \bbE_{Q_{U}}\big[\mathcal{E}\big((\frac{Q_{X|U}}{\pi})^{1/p},(\frac{Q_{X|U}}{\pi})^{1/p'}\big)\big]=\alpha} D(Q_{X|U}\|\pi_{X}|Q_{U})\\
 & =\mathbb{U}[\Xi_{p}](\alpha).
\end{align}
The asymptotic tightness of~\eqref{eqn:asympt_tight_U} can be verified
by a   time-sharing argument (cf.\ the discussion after Theorem~\ref{thm:strongsse}). \end{proof}

\section{Strengthened Hypercontractivity Inequalities} \label{sec:stronger}

The tools we reviewed in the preceding sections serve as ingredients for the culmination of this section---namely, a strengthened version of the hypercontractivity inequality. For the sake of clarity, we focus on the DSBS. Before doing so, we provide explicit expressions for the linear and nonlinear $p$-log-Sobolev
inequalities particularized to the DSBS.


For the DSBS,  $\mathcal{X}=\{0,1\}$, $\pi=\pi_{Y}=\Bern(1/2)$
and $T_{t}$ is the Markov operator induced by  $\pi_{X|Y}$ and is   given by 
\begin{equation}
T_{t}f(y)=f(y)\frac{1+\rme^{-t}}{2}+f(1-y)\frac{1-\rme^{-t}}{2} \qquad y\in\{0,1\}.\label{eq:FIhypercube}
\end{equation}
Note that   the operator $T_{t}^{\otimes n}$ is the same  as  $T_{\rho}$
  in Section \ref{ch:NICD} with $\rho=\rme^{-t}$.   From \eqref{eq:FIhypercube},
we know that $L_{x,y}=\bone\{x\neq y\}-1/2$ which is obtained by differentiating
$T_{t}$ with respect to $t$ and evaluating the derivative at $t=0$. Moreover, the   Dirichlet form
for this case is
\begin{align}
\hspace{-.25in}{\mathcal{E}}_{n}(f,g) & =-\frac{1}{2}\langle\Delta f,g\rangle\quad\mbox{and}\label{eq:FIdiri_cube}\\
\hspace{-.25in}{\mathcal{E}}_{n}(f,f) & =\frac{2^{-n}}{4}\sum_{(x^{n},y^{n}):x^{n}\sim y^{n}}\big(f(x^{n})-f(y^{n})\big)^{2}=\frac{1}{4}\, \mathbf{I}[f]\,,\label{eq:Eff}
\end{align}
where $\Delta f(x^{n}):=\sum_{y^{n}:y^{n}\sim x^{n}}(f(y^{n})-f(x^{n}))$,
and $x^{n}\sim y^{n}$ means that $x^{n},y^{n}\in\{0,1\}^{n}$ differ
in exactly one coordinate. Recall that here $\mathbf{I}[f]$ denotes
the total influence of $f$; see Definition \ref{def:total_influence}.


For the DSBS, the optimal constant $C$ in the (linear) $p$-log-Sobolev inequality is $2$; see \citet{gross1975logarithmic}.
The (linear) $p$-log-Sobolev inequality with optimal constant
$2$ can be derived   from the single-function version  of  the hypercontractivity
inequalities for the DSBS given in Theorem~\ref{thm:hyper_single}. This can be done  by differentiating
both sides of the hypercontractivity inequalities with respect to
$\rho$ and evaluating the derivative at  $\rho=1$. 

\subsection{Strong Log-Sobolev Inequalities}

\citet{polyanskiy2019improved} proved the 
dimension-independent \emph{nonlinear} $p$-log-Sobolev inequalities as stated in the next theorem. Before introducing these inequalities, we define  
$b_{p}:[0,\ln2]\to[0,\infty)$ to be  the convex increasing function given
by 
\begin{align}
b_{p}(t) & := \left\{ \!\begin{array}{cc}
\displaystyle\frac{\sign(p\! -\! 1)}{2}\left(1\!-\!y^{\frac{1}{p}}(1-y)^{1-\frac{1}{p}}\!-\! y^{1-\frac{1}{p}}(1-y)^{\frac{1}{p}}\right)  & p \ne 0,1 \vspace{.03in}\\
\displaystyle\Big(\frac{1}{2}-y\Big)\ln\frac{1-y}{y} & p=1
\end{array} \right. , 
\end{align}
where $y (t):=h^{-1}(\ln2-t)$ and $h^{-1}:[0,\ln2]\to[0,1/2]$ is the inverse of the binary entropy
function~$h$ {\em with base $\rme$} when its domain is restricted to $[0,1/2]$.


\begin{theorem}[$p$-log-Sobolev Inequality for the DSBS]\label{thm:lsi_cube}
Let $p\in\bbR\setminus\{0,1\}$. For all $f:\{0,1\}^{n}\to[0,\infty)$
(and $f>0$ if $p<1$), 
it holds that 
\begin{equation}
\frac{1}{n}\sign(p-1)\ \overline{\mathcal{E}}_{n}(f,f^{p-1})\ge b_{p}\Big(\frac{1}{n}\overline{\Ent}(f^{p})\Big)\,,\label{eq:FIblsi_p}
\end{equation}
where the normalized Dirichlet form is given by~\eqref{eq:FIdiri_cube}.
%
Let $p=1$. For all $f:\{0,1\}^{n}\to(0,\infty)$, it holds that 
\begin{equation}
\frac{1}{n}\overline{\mathcal{E}}_{n}(f,\ln f)\ge b_{1}\Big(\frac{1}{n}\overline{\Ent}(f)\Big)\,,\label{eq:FIblsi_1}
\end{equation}
\end{theorem}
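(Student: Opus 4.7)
The plan is to derive Theorem~\ref{thm:lsi_cube} as a consequence of the single-letter (non-linear) $p$-log-Sobolev bound in Theorem~\ref{thm:strongSLI}, by computing the one-dimensional log-Sobolev function $\Xi_p$ explicitly for the two-point space and verifying that it is already concave, so that the upper concave envelope in \eqref{eq:FI-strongerlogSobolev} is trivial.

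The first step is an explicit one-shot ($n=1$) calculation on $\{0,1\}$ with $\pi=\Ber(1/2)$. Writing $a=f(0)$, $b=f(1)$, and using $L_{x,x}=-1/2$, $L_{x,y}=1/2$ for $x\ne y$ (read off from \eqref{eq:FIhypercube} by differentiating at $t=0$), one obtains
\begin{equation}
\mathcal{E}(f,f^{p-1})=\tfrac14(a-b)(a^{p-1}-b^{p-1}),\qquad \bbE_\pi[f^p]=\tfrac12(a^p+b^p).
\end{equation}
Parametrizing by $y\in[0,1/2]$ via the normalization $a^p=2(1-y)$, $b^p=2y$ (so that $\bbE_\pi[f^p]=1$ and $f^p/\pi$ is a Bernoulli $(y)$-tilt of $\pi$), a short algebraic manipulation gives
\begin{equation}
\overline{\mathcal{E}}(f,f^{p-1})=\tfrac12\bigl[1-y^{1/p}(1-y)^{1-1/p}-y^{1-1/p}(1-y)^{1/p}\bigr],\qquad \overline{\Ent}(f^p)=\ln 2-h(y).
\end{equation}
Recognising the bracket as $\frac{2}{\sign(p-1)}b_p\bigl(\ln 2-h(y)\bigr)$ by the very definition of $b_p$, one sees that on the level set $\{f:\overline{\Ent}(f^p)=t\}$ the functional $\sign(p-1)\overline{\mathcal{E}}(f,f^{p-1})$ is constant and equal to $b_p(t)$. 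Thus, after the change of variables $\alpha=c_p\overline{\mathcal{E}}(f,f^{p-1})$,
\begin{equation}
\Xi_p(\alpha)\;=\;b_p^{-1}\!\Bigl(\tfrac{4|p-1|}{p^2}\,\alpha\Bigr),
\end{equation}
so the pair $(\alpha,\Xi_p(\alpha))$ traces out a single curve with no supremum (every $f$ in the constraint set gives the same entropy).

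The second step is to note that $b_p:[0,\ln 2]\to[0,\infty)$ is, by hypothesis, convex and strictly increasing, hence $b_p^{-1}$ is concave, and therefore $\Xi_p$ is concave. This is precisely the side condition in Theorem~\ref{thm:strongSLI} under which $\mathbb U[\Xi_p]=\Xi_p$ and the bound $\Xi_p^{(n)}\le \mathbb U[\Xi_p]$ is tight for every finite $n$. Applied to an arbitrary $f:\{0,1\}^n\to[0,\infty)$ (nonnegative when $p>1$, strictly positive when $p<1$), this yields $\frac{1}{n}\overline{\Ent}(f^p)\le \Xi_p\bigl(\frac{c_p}{n}\overline{\mathcal{E}}_n(f,f^{p-1})\bigr)$, and inverting $\Xi_p$ (equivalently, applying the increasing function $b_p$ to both sides) turns this into
\begin{equation}
b_p\!\Bigl(\tfrac{1}{n}\overline{\Ent}(f^p)\Bigr)\;\le\;\tfrac{4|p-1|}{p^2}\cdot\tfrac{c_p}{n}\overline{\mathcal{E}}_n(f,f^{p-1})\;=\;\tfrac{\sign(p-1)}{n}\,\overline{\mathcal{E}}_n(f,f^{p-1}),
\end{equation}
which is exactly~\eqref{eq:FIblsi_p}. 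The case $p=1$ in~\eqref{eq:FIblsi_1} then follows by taking the limit $p\to 1$ in~\eqref{eq:FIblsi_p}, since both $b_p$ and the map $f\mapsto\overline{\mathcal{E}}_n(f,f^{p-1})$ are continuous in $p$ at $p=1$ (the latter giving $\overline{\mathcal{E}}_n(f,\ln f)$).

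The main obstacle, and the step that requires care, is the one-shot computation together with the verification that $\Xi_p$ is the \emph{single} curve $b_p^{-1}(\cdot)$ rather than a larger set: this is what makes the upper concave envelope disappear and lets Theorem~\ref{thm:strongSLI} close the argument in one stroke. The sign bookkeeping for the three regimes $p>1$, $0<p<1$, $p<0$ must be handled uniformly (the factor $\sign(p-1)$ in $b_p$ and the factor $|p-1|/(p-1)$ coming from $c_p$ conspire to give the correct sign in every case), and one must check that $b_p$'s monotonicity and convexity on $[0,\ln 2]$---asserted as part of its definition---really hold for all $p\in\mathbb{R}\setminus\{0,1\}$ so that $b_p^{-1}$ and hence $\Xi_p$ are honestly concave; this is a straightforward but slightly tedious calculus exercise in the variable $y\in[0,1/2]$.
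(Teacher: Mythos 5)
Your argument is correct and follows the paper's own route: carry out the explicit $n=1$ computation on $\{0,1\}$ with $\pi=\Bern(1/2)$ to identify $\Xi_p$ as a rescaled $b_p^{-1}$, deduce concavity of $\Xi_p$ from convexity and monotonicity of $b_p$, and close via Theorem~\ref{thm:strongSLI}. One small correction to the last sentence: as $p\to1$ the map $p\mapsto\overline{\mathcal{E}}_n(f,f^{p-1})$ is continuous with limit $\overline{\mathcal{E}}_n(f,1)=0$ (not $\overline{\mathcal{E}}_n(f,\ln f)$), and likewise $b_p\to 0$ pointwise; to extract~\eqref{eq:FIblsi_1} from~\eqref{eq:FIblsi_p} one must first divide both sides by $|p-1|$, using the first-order expansions $\sign(p-1)\,\overline{\mathcal{E}}_n(f,f^{p-1})=|p-1|\,\overline{\mathcal{E}}_n(f,\ln f)+o(|p-1|)$ and $b_p(t)=|p-1|\,b_1(t)+o(|p-1|)$, and only then let $p\to1$.
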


The inequality in \eqref{eq:FIblsi_1} is the limiting case of \eqref{eq:FIblsi_p}
as $p\to1$. 
Moreover, these inequalities  
are sharp in the sense that given $p$, there exists a nonnegative
function $f$ such that the equality holds. 


\begin{proof}[Proof Sketch of Theorem \ref{thm:lsi_cube}] 
Theorem \ref{thm:lsi_cube}
follows directly from Theorem \ref{thm:strongSLI} by observing that
$b_{p}$ is   the inverse   of $\Xi_{p}$ when $\pi_{XY}$ is particularized to the DSBS. Moreover, $\Xi_{p}$ is concave and increasing. The monotonicity
and concavity of $\Xi_{p}$ (or equivalently, the monotonicity and
convexity of $b_{p}$) can be shown by calculating the first and second
derivatives of $\Xi_{p}$ (or~$b_{p}$); see \citet{polyanskiy2019improved}
for more details. Furthermore, the asymptotic sharpness of \eqref{eq:FIblsi_p}
follows directly from the asymptotic sharpness of \eqref{eq:FI-strongerlogSobolev}.
\end{proof}

Based on Theorem  \ref{thm:lsi_cube}, we
are almost ready to introduce a strengthened version of the forward hypercontractivity
inequality  shown by \citet{polyanskiy2019improved}.  
Before doing so, we would like to discuss an intimate relationship
between the linear and nonlinear log-Sobolev inequalities and the
edge-isoperimetric inequality given in Theorem~\ref{thm:EdgeIsoper}, as promised
below  \eqref{eq:isoperi}. 
We first consider the linear log-Sobolev
inequality. Consider the DSBS and the case $p=2$. The linear $2$-log-Sobolev
inequality with optimal constant $C=2$ reduces to
\begin{equation}
\Ent(f^{2})\le 2\,\mathcal{E}(f,f) \quad\mbox{for all}\;\, f\ge0.\label{eq:FIpLSI-1}
\end{equation}
By the tensorization property and utilizing
\eqref{eq:Eff}, 
\begin{equation}
\Ent(f^{2})\le 2\,\mathcal{E}_{n}(f,f)=\frac{1}{2}\ \mathbf{I}[f].\label{eq:FIpLSI-1-2}
\end{equation}
Setting $f$ to be a Boolean function with mean $a$, we obtain 
\begin{equation}
\mathbf{I}[f]\ge 2a\, \ln \Big(\frac{1}{a}\Big).\label{eq:FIedgeisoper}
\end{equation}
Note that in the sharp edge-isoperimetric inequality in~\eqref{eq:isoperi},
the logarithm used is $\log$ (to the base $2$), instead of $\ln$.
Hence, in \eqref{eq:FIedgeisoper}, there is a multiplicative factor $\log \rme$ off from the
sharp inequality in~\eqref{eq:isoperi}.

\begin{figure}
\centering \includegraphics[width=0.8\columnwidth]{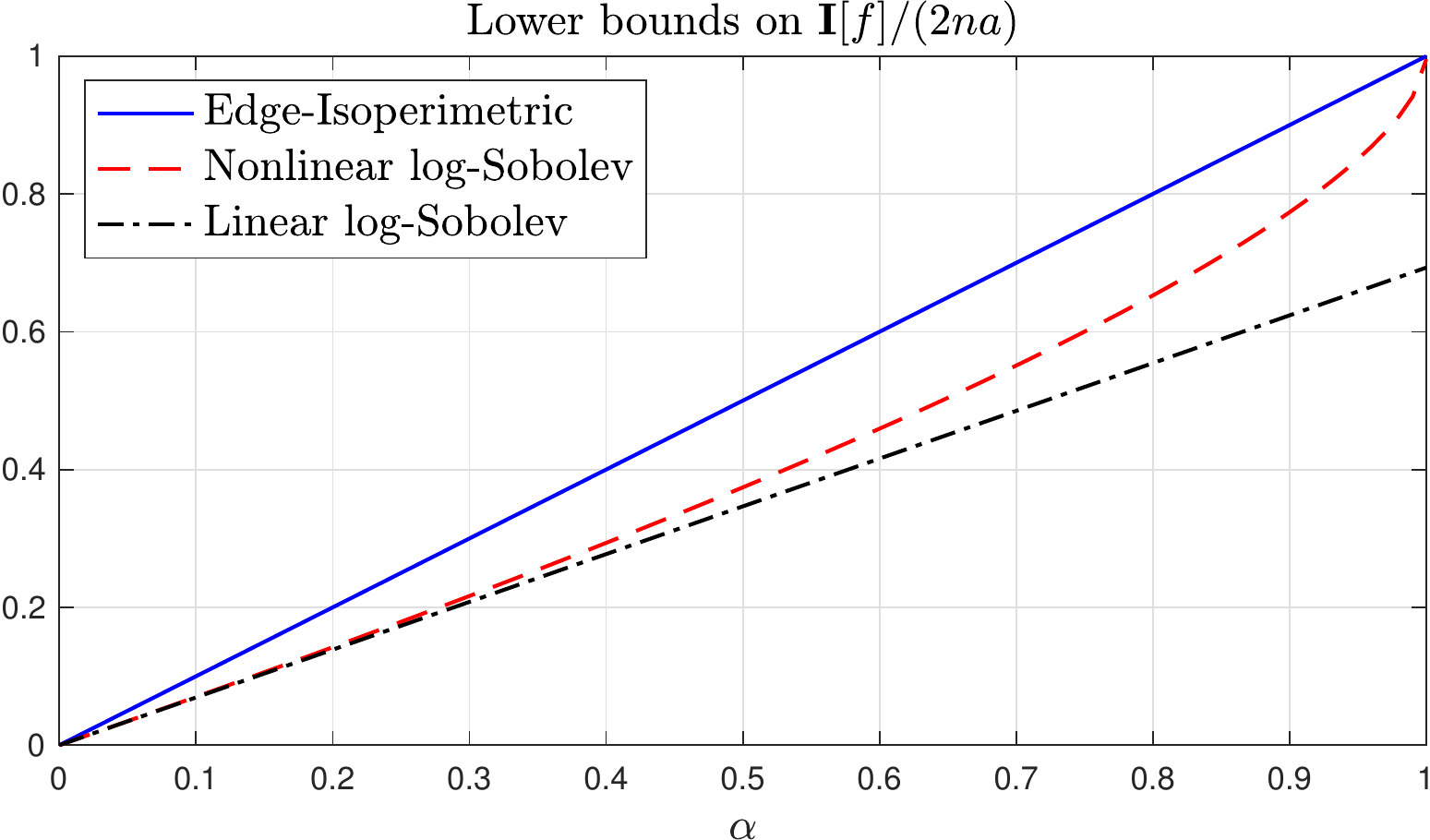} 
\caption{Comparison of the distinct parts  $  \alpha$ (edge-isoperimetric),  $\alpha\ln 2$ (linear log-Sobolev) and  $2b_2(\alpha)$ (nonlinear log-Sobolev)  when $a$ is set to  $2^{-\alpha}$.}
\label{fig:edge_isoper_ineq} 
\end{figure}

We next consider the nonlinear log-Sobolev inequality for the DSBS and $p=2$.
For this case, Theorem \ref{thm:lsi_cube}
reduces to the statement that  for all $f\ge0$, it holds that
\begin{equation}
\frac{1}{n}\overline{\mathcal{E}}_{n}(f,f)\ge b_{2}\Big(\frac{1}{n}\overline{\Ent}(f^{2})\Big)\,,\label{eq:FIblsi_p-1}
\end{equation}
where $b_{2}:[0,\ln2]\to[0,\infty)$ is the convex increasing function
given by 
\begin{align}
b_{2}(t) & =\frac{1-2\sqrt{y(1-y)}}{2}\,
\end{align}
with $y=h^{-1}(\ln2-t)$ (coinciding with the general definition of $b_p$). By~\eqref{eq:Eff} and setting $f$ to be
a Boolean function with mean $a$, we obtain 
\begin{equation}
\mathbf{I}[f]\ge4an\,b_{2}\Big(\frac{1}{n}\ln\Big( \frac{1}{a}\Big)\Big).\label{eq:FIedgeisoper2}
\end{equation}
This inequality is tighter than  \eqref{eq:FIedgeisoper},
but looser than \eqref{eq:isoperi}. This point can be observed
from the facts that $b_{2}$ is convex and increasing, $b_{2}'(0)=1/2$,
$b_{2}(0)=0, b_{2}(\ln2)=1/2$, and hence, $t/2\le b_{2}(t)\le\frac{t}{2\ln2}$
for all $t\in[0,\ln2]$. If we consider the case $a=2^{-\alpha}$, then the bounds in  \eqref{eq:isoperi},  \eqref{eq:FIedgeisoper}, and \eqref{eq:FIedgeisoper2} are  $2na \, \alpha$, $4na \, b_2 (\alpha)$, and $2na \, \alpha\ln2$ respectively. Omitting the common factor $2na$, we plot  $\alpha$, $2b_2(\alpha)$, and $\alpha\ln2$ in Fig.~\ref{fig:edge_isoper_ineq}, which depicts the relations among these three inequalities. 

We   note that it makes eminent  sense that~\eqref{eq:FIedgeisoper} and~\eqref{eq:FIedgeisoper2}
are looser than~\eqref{eq:isoperi}. This is because
that  the former two inequalities are derived from the linear and nonlinear
log-Sobolev inequalities in~\eqref{eq:FIpLSI-1} and~\eqref{eq:FIblsi_p-1} which are valid not only for Boolean functions,
but for {\em any nonnegative functions}. 
 In contrast, the
edge-isoperimetric inequality in~\eqref{eq:isoperi}, which is derived
by a combinatorial method, is specific to  and sharp for Boolean functions.  


\subsection{Strengthened Version of Hypercontractivity Inequalities}

We now introduce a strengthened version of forward hypercontractivity
inequality due to \citet{polyanskiy2019improved}.   We first introduce an additional definition.
For a nonnegative function $f:\calX^n\to [0,\infty)$, the \emph{$p$-entropy} of $f$ \cite{yu2021strong} is defined as 
\begin{equation}
\overline{\Ent}_{p}(f) := \frac{p}{p-1} \log \frac{\|f\|_p}{\|f\|_1}. \label{eq:FIp_ent}
\end{equation}
In fact, if $f$ is the Radon--Nikodym derivative of $Q$ with respect to~$\pi$, i.e.,  $f=\rmd Q/\rmd \pi$, 
then $\overline{\Ent}_{p}(f)=D_p(Q\|\pi)$; also see~\eqref{eqn:strong_renyi_dpi}. 
Basic properties of the $p$-entropy, such as its continuity and monotonicity, can be found in \citet{yu2021strong}. 
Let $g:[0,\ln2]\to[2,2/\ln2]$ be defined as 
\begin{equation}
g(t):=\frac{2-4\sqrt{y(1-y)}}{\ln2-h(y)},
\end{equation}
where $y=h^{-1}(\ln2-t)$.

\begin{theorem}\label{th:hcb} Fix two numbers $1<p<\infty$ and $0\le\alpha\le\ln2$.
Then the differential equation  in $u:[0,\infty)\to \bbR$  
\begin{equation}
\frac{\mathrm{d}}{\mathrm{d}t}u(t)=g\Big(\frac{\alpha}{p'}(1+\rme^{-u(t)})\Big)\label{eq:FIhcb_0}
\end{equation}
with initial solution $u(0)=\ln(p-1)$  
has a unique solution
on $[0,\infty)$. Furthermore, for any $f:\{0,1\}^{n}\to[0,\infty)$
with $\frac{1}{n}\overline{\Ent}_{p}(f)\ge\alpha$, we have 
\begin{equation}
\|T_{t}^{\otimes n}f\|_{q(t)}\le\|f\|_{p}\quad\mbox{where}\quad q(t)=1+\rme^{u(t)}\,.\label{eq:FIhcb_1}
\end{equation}
\end{theorem}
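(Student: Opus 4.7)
\textbf{Proof proposal for Theorem \ref{th:hcb}.}

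The plan is to adapt the classical semigroup proof of the hypercontractivity inequality (Theorem \ref{thm:LSI-HC}) to the strengthened setting, using the \emph{nonlinear} $p$-log-Sobolev inequality (Theorem \ref{thm:lsi_cube}) in place of the linear version. First I would settle the well-posedness of the ODE: since the map $u\mapsto g\bigl(\tfrac{\alpha}{p'}(1+\rme^{-u})\bigr)$ is bounded (because $g$ takes values in $[2,2/\ln 2]$ and, for $u\ge\ln(p-1)$, the argument stays in $[\alpha/p',\alpha]\subset[0,\ln 2]$) and locally Lipschitz in $u$, the Picard--Lindel\"of theorem gives a unique maximal solution, and the boundedness of $g$ rules out finite-time blow-up, yielding a unique solution on $[0,\infty)$ with $q(t)=1+\rme^{u(t)}$ nondecreasing.

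Next, assume WLOG that $f>0$ on $\{0,1\}^n$ (otherwise approximate) and that $\mathbb{E}[f]=1$. Set $F_t:=T_t^{\otimes n}f$, $q=q(t)$, and $\zeta(t):=\ln\|F_t\|_{q(t)}$. A direct computation using $\partial_t F_t=\mathbf{L}^{\otimes n}F_t$ and the identity $\mathbb{E}[F_t^{q-1}\mathbf{L}^{\otimes n}F_t]=-\mathcal{E}_n(F_t,F_t^{q-1})$ gives
\begin{equation}
\zeta'(t)=\frac{q'(t)}{q(t)^2}\,\overline{\mathrm{Ent}}\bigl(F_t^{q(t)}\bigr)-\overline{\mathcal{E}}_n\bigl(F_t,F_t^{q(t)-1}\bigr),
\end{equation}
exactly as in the derivation below \eqref{eq:FIhca_2-2b}. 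The nonlinear log-Sobolev inequality \eqref{eq:FIblsi_p} for the DSBS applied with $p\leftarrow q(t)>1$ yields
$\overline{\mathcal{E}}_n(F_t,F_t^{q-1})\ge n\,b_{q(t)}\bigl(A_t\bigr)$ where $A_t:=\tfrac1n\overline{\mathrm{Ent}}(F_t^{q(t)})$.

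The heart of the argument is the choice of ODE. The key step I would carry out is to verify the identity
\begin{equation}
q(t)^2\,\frac{b_{q(t)}(A_t)}{A_t}\;\ge\;q'(t)\qquad\text{whenever}\qquad A_t\ge \tfrac{\alpha}{p'}\bigl(1+\rme^{-u(t)}\bigr),
\end{equation}
which, combined with $q'(t)=(q(t)-1)u'(t)$ and the specific form $u'(t)=g\bigl(\tfrac{\alpha}{p'}(1+\rme^{-u(t)})\bigr)$, forces $\zeta'(t)\le 0$ and hence $\zeta(t)\le\zeta(0)=\ln\|f\|_p$. The function $g$ has been engineered precisely to encode the slope of the critical curve arising from equality in the nonlinear LSI at a two-point mass with parameter $y=h^{-1}(\ln 2-A_t)$; the extremality computation underlying Theorem \ref{thm:lsi_cube} (where the extremizer $f$ is a $\{a,b\}$-valued function on $\{0,1\}^n$) directly produces the ratio $q^2b_q/A=g(A)\cdot q/(q-1)$ at the critical level, thereby matching $u'(t)$ to $g(\cdot)$.

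The main obstacle will be step (ii): propagating the initial entropy hypothesis $\tfrac1n\overline{\mathrm{Ent}}_p(f)\ge\alpha$ to the running estimate $A_t\ge\tfrac{\alpha}{p'}(1+\rme^{-u(t)})$ needed to apply the monotonicity of $g$. The classical linear argument does not require any such propagation because $b_p/A$ is constant, but here the rate $g(\cdot)$ improves with larger entropy, so one must control $A_t$ from below along the semigroup flow. I would handle this by showing that along the extremal trajectory (the one saturating the nonlinear LSI at every $t$), the pair $(q(t),A_t)$ satisfies a coupled ODE whose second component stays on the curve $A_t=\tfrac{\alpha}{p'}(1+\rme^{-u(t)})$; this amounts to verifying that tensorization of the extremal two-point functions together with the ``transport'' identity $\tfrac{d}{dt}A_t=-u'(t)\cdot\tfrac{A_t}{q(t)-1}+(\text{LSI equality correction})$ reduces to the claimed bound. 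For non-extremal $f$ one uses that $A_0\ge\tfrac{p-1}{p}\cdot p\cdot\tfrac1n\overline{\mathrm{Ent}}_p(f)=(p-1)\alpha\ge\tfrac{\alpha}{p'}\cdot p=\alpha$ together with a Gronwall-type comparison driven by monotonicity of $g$ to maintain the lower bound for all $t\ge0$, closing the proof.
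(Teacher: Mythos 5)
Your overall plan — ODE well-posedness via Picard--Lindel\"of, then integrate the nonlinear $p$-log-Sobolev inequality (Theorem~\ref{thm:lsi_cube}) along the semigroup exactly as in the proof of Theorem~\ref{thm:LSI-HC} — coincides with the paper's one-sentence sketch, and you rightly flag the genuinely new obstacle: with $F_t:=T_t^{\otimes n}f$, the nonlinear LSI gives a dissipation rate that improves with the running entropy $A_t=\tfrac1n\overline{\Ent}\bigl(F_t^{q(t)}\bigr)$, so one must control $A_t$ from below along the flow, a step the linear Gross argument never needs. However, two of your intermediate claims are wrong, and the propagation step as sketched does not close.

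The asserted identity ``$q^2 b_q(A)/A = g(A)\cdot q/(q-1)$'' fails already at $q=2$: there $g(A)=4b_2(A)/A$, so your formula would read $4b_2/A = 2g(A) = 8b_2/A$. Even the more natural guess $g(A)=q^2b_q(A)/\bigl((q-1)A\bigr)$ is an identity only at $q=2$; for general $q>1$ what actually holds (and suffices) is the \emph{inequality} $g(A)\le q^2 b_q(A)/\bigl((q-1)A\bigr)$, i.e.\ the map $q\mapsto q^2 b_q(A)/(q-1)$, symmetric under $q\leftrightarrow q'$, is minimized at its fixed point $q=2$. Your arithmetic for $A_0$ is also off: $\tfrac{\alpha}{p'}\cdot p = (p-1)\alpha$, not $\alpha$, so the chain ``$A_0\ge(p-1)\alpha\ge\tfrac{\alpha}{p'}\cdot p=\alpha$'' reduces to $(p-1)\alpha\ge(p-1)\alpha$ and gives nothing for $1<p<2$. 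The clean bound is $\overline{\Ent}(f^p)\ge\overline{\Ent}_p(f)$, a one-line consequence of convexity of $r\mapsto\ln\mathbb{E}[f^r]$ (tangent at $r=p$ above the secant through $r=1$), giving $A_0\ge\alpha$ for every $p>1$. Finally, the propagation cannot be a Gronwall estimate on $A_t$ alone, since $A_t$ may well decrease through any fixed level. The same convexity fact gives the \emph{running} bound $A_t\ge\tfrac{q(t)}{q(t)-1}\cdot\tfrac{\zeta(t)}{n}$ with $\zeta(t):=\ln\|T_t^{\otimes n}f\|_{q(t)}$, and the argument of $g$ in \eqref{eq:FIhcb_0} is exactly $\tfrac{\alpha}{p'}(1+e^{-u(t)})=\tfrac{\alpha}{p'}\cdot\tfrac{q(t)}{q(t)-1}$; hence $A_t$ meets the threshold \emph{whenever} $\zeta(t)/n\ge\alpha/p'$. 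On that set the nonlinear LSI, the monotonicity of $A\mapsto b_q(A)/A$ (since $b_q$ is convex with $b_q(0)=0$), and the $q=2$ minimality of $g$ combine to give $\zeta'(t)\le 0$. Since the hypothesis $\tfrac1n\overline{\Ent}_p(f)\ge\alpha$ is precisely $\zeta(0)\ge n\alpha/p'$, the barrier quantity $\max\{\zeta(t),\,n\alpha/p'\}$ is nonincreasing, so $\zeta(t)\le\zeta(0)$ for all $t\ge0$, which is \eqref{eq:FIhcb_1}. This barrier argument is the ingredient missing from your ``step (ii)''.
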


The core idea of the proof of Theorem~\ref{th:hcb} is to integrate
both sides of the nonlinear $p$-log-Sobolev inequality in Theorem
\ref{thm:lsi_cube}. It is similar to the proof of Theorem
\ref{thm:LSI-HC}, and hence, omitted. Theorem~\ref{th:hcb} was used by \citet{ordentlich2020note} to prove the limiting cases as $\rho\downarrow0$
and $\rho\uparrow1$ of the NICD problem in the LD regime.

As remarked by \citet{polyanskiy2019improved}, the function $g$
is a smooth, convex, and strictly increasing bijection. Consequently,
the function $q(t)$ in \eqref{eq:FIhcb_1} is smooth and satisfies
$q(t)>1+(p-1)\rme^{2t}$ for all $t>0$. Note that the maximum (and hence best possible) parameter $q(t)$ for the classic forward hypercontractivity is equal to $1+(p-1)\rme^{2t}$; see   \eqref{eq:NICDFHC-1}.
Hence, the inequality in
\eqref{eq:FIhcb_1} strictly improves the classic forward hypercontractivity
inequality in \eqref{eq:NICDFHC-1}. Furthermore, $q(t)$ also satisfies
\begin{align}
q(t) & =p+q'(0)t+\frac{1}{2}q''(0)t^{2}+o(t^{2}),\qquad \mbox{as } t\to0\,,\label{eq:hcb_3}
\end{align} where
\begin{align}
q'(0) & =(p-1)g(\alpha)\label{eq:hcb_3a}, \quad\mbox{and}\\
q''(0) & =(p-1)\Big(g(\alpha)^{2}-g'(\alpha)g(\alpha)\frac{\alpha}{p}\Big).\label{eq:hcb_3c}
\end{align}

Since the strengthened version of the hypercontractivity inequality in \eqref{eq:FIhcb_1}
is obtained by integrating both sides of the \emph{sharp} nonlinear
$p$-log-Sobolev inequality in Theorem \ref{thm:lsi_cube}, one
can   observe that  \eqref{eq:FIhcb_1} is locally
sharp at $t=0$ in the following sense. For every $\hat{q}(t)$ such that
$\hat{q}(0)=p$ and $\hat{q}'(0)>q'(0)$ there exists a function $f$ with
$\frac{1}{n}\overline{\Ent}_{p}(f)\ge\alpha$ such that $\|T_{t}^{\otimes n}f\|_{\hat{q}(t)}>\|f\|_{p}$
holds for any sufficiently small $t$.  However, 
the inequality in \eqref{eq:FIhcb_1} does not appear to be  \emph{globally}
asymptotically sharp in the sense that there exists a function $\hat{q} : [0,\infty)\to\bbR$
such that $\hat{q}(t)>q(t)$ for all $t>0$, and $\|T_{t}^{\otimes n}f\|_{\hat{q}(t)}\le\|f\|_{p}$
  holds for all $f:\{0,1\}^{n}\to[0,\infty)$ with $\frac{1}{n}\overline{\Ent}_{p}(f)\ge\alpha$. Recently, a globally sharp inequality was derived by the first author of this monograph~\cite{yu2021strong}, who showed that given $q$, the minimum $p$  such that the inequality $\|T_{t}^{\otimes n}f\|_{q}\le\|f\|_{p}$
holds for any $f:\{0,1\}^{n}\to[0,\infty)$ with $\frac{1}{n}\overline{\Ent}_{p}(f)\ge\alpha$ is $\alpha/\varphi_{q}(\alpha)$ (where $\varphi_q$ is defined in Definition~\ref{def:varphi_q}), in which $\rho=\rme^{-t}$. Moreover, this sharp bound is asymptotically attained by indicators of Hamming spheres. 
We compare the sharp bound given by   \citet{yu2021strong}  and the bound in Theorem~\ref{th:hcb} in Fig.~\ref{fig:hypercont_pvsq}.
This figure  indicates that~\eqref{eq:FIhcb_1} is close to optimal as $p \downarrow 1$. 


\begin{figure}[!ht]
\centering \includegraphics[width=0.8\columnwidth]{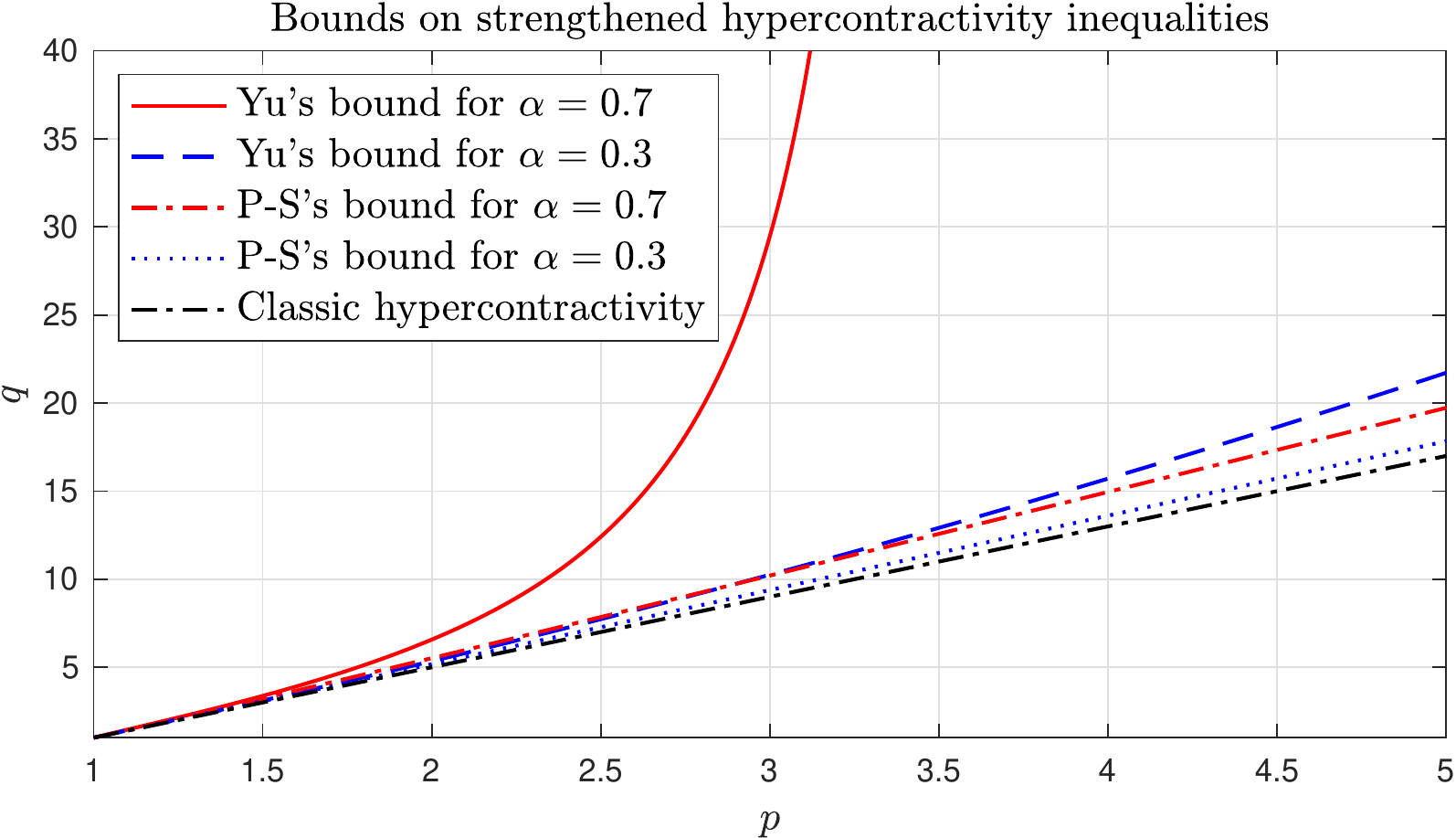} 
\caption{Comparison of the sharp bound derived by \citet{yu2021strong}  and the bound in Theorem~\ref{th:hcb} derived by \citet{polyanskiy2019improved} (P-S), where $\rho=0.5$.  Note that we use  $q$ and $p$ 
to denote $q(t)$  and $p$ respectively in Theorem \ref{thm:lsi_cube}. }
\label{fig:hypercont_pvsq} 
\end{figure}

Along the same lines, it is natural to investigate  sharper versions of BL inequalities. Indeed, Polyanskiy posed a conjecture concerning the asymptotically sharp  BL  
inequalities in 2016. This conjecture  is  stated in \citet{kirshner2021moment} and reproduced here.

\begin{conjecture} \label{conj:pol} Fix $\rho\in(0,1)$, $q>1$, and a scalar
$\alpha\in (0,\ln 2)$. Then, there exists 
\begin{equation}
p_0< 1+\rho^{2}(q-1)
\end{equation}
 such that
for any $p\ge p_0$ the maximum of $\frac{1}{n}\ln\frac{\|T_{\rho}f\|_{q}}{\|f\|_{p}}$
over all nonnegative functions $f$ with $\frac{1}{n}\overline{\Ent}_{p}(f)\ge\alpha$
is asymptotically attained by a sequence of functions that are indicators of Hamming
spheres with radii converging to some constant as $n\to\infty$. \end{conjecture}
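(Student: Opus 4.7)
The plan is to relate the conjectured variational problem to the large deviations asymptotics of the $q$-stability problem in Section~\ref{sec:MD-and-LD}, then transfer the known asymptotic optimality of Hamming spheres (Theorem~\ref{thm:strongqstability} together with the convexity/concavity result giving~\eqref{eq:Upsilon_q_LD}) from indicator functions to arbitrary nonnegative functions. First, restrict the supremum to rescaled indicators $f=c\,\mathbbm{1}_{\mathcal{A}}$ normalized so that $\|f\|_p=1$; a direct computation gives $\overline{\mathrm{Ent}}_p(f)=\log(1/\pi_X^n(\mathcal{A}))$ independently of $c$, so the entropy constraint $\tfrac{1}{n}\overline{\mathrm{Ent}}_p(f)\ge\alpha$ is exactly the large deviations constraint $\pi_X^n(\mathcal{A})\le 2^{-n\alpha}$. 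Under this identification, the objective reduces to $\alpha'/p - \tfrac{1}{n}\log\|T_\rho\mathbbm{1}_{\mathcal{A}}\|_q$ with $\alpha':=-\tfrac{1}{n}\log\pi_X^n(\mathcal{A})\ge\alpha$, so by~\eqref{eq:Upsilon_q_LD} its asymptotic maximum over indicators equals
\[
V(p,q,\rho,\alpha)\;:=\;\sup_{\alpha'\ge\alpha}\Big\{\tfrac{\alpha'}{p}\,-\,\Upsilon_{q,\mathrm{LD}}(\alpha')\Big\},
\]
attained asymptotically by Hamming spheres of linear radius $r_n\sim\lambda^*n$, where $\lambda^*=h^{-1}(1-\alpha^*)$ and $\alpha^*=\alpha^*(p,q,\rho,\alpha)$ realizes the supremum. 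Strict convexity and strict monotonicity of $\Upsilon_{q,\mathrm{LD}}$ (the extra ingredient beyond Theorem~\ref{thm:strongqstability}) ensures $V$ is finite and $\alpha^*\in[\alpha,\alpha_{\max}(\pi_X))$ for all $p$ above a certain threshold.

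The second and decisive step is the extension from indicators to general nonnegative $f$. My approach is the layer-cake decomposition $f^p(x^n)=\int_0^\infty\mathbbm{1}\{f^p(x^n)>t\}\,\mathrm{d}t$ which, after applying Minkowski's integral inequality to $T_\rho f$, reduces $\|T_\rho f\|_q$ to a weighted superposition of indicator norms $\|T_\rho\mathbbm{1}_{\mathcal{A}_t}\|_q$ with $\mathcal{A}_t=\{f^p>t\}$. Each such term is controlled by the indicator bound of Step~1 applied at the exponent $\alpha_t:=-\tfrac{1}{n}\log\pi_X^n(\mathcal{A}_t)$, and the $p$-entropy constraint translates into an integral constraint on the profile $t\mapsto\alpha_t$ that forces concentration of mass on high-$\alpha_t$ level sets. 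The globally asymptotically sharp strengthened hypercontractivity bound of~\cite{yu2021strong}, alluded to after Theorem~\ref{th:hcb} and illustrated in Fig.~\ref{fig:hypercont_pvsq}, is the ingredient precisely designed to match the indicator-only supremum: it identifies the minimum $p$ for which $\|T_\rho f\|_q\le\|f\|_p$ holds under a $p$-entropy lower bound as $\alpha/\varphi_q(\alpha)$, and is attained by Hamming sphere indicators. Combining this with Theorem~\ref{thm:strongqstability} should yield a single-letter upper bound on the conjectured supremum that coincides with $V(p,q,\rho,\alpha)$.

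Finally, to see that one may choose $p_0<1+\rho^2(q-1)=:p_c$, observe that as $\alpha\downarrow 0$ the optimizer $\alpha^*(p)$ tends to $0$ and the problem reduces to the moderate deviations regime, in which by Theorem~\ref{thm:strong-q-stability} the threshold at which Hamming balls cease to be optimal is exactly $p=p_c$. For any fixed $\alpha>0$ the constraint $\alpha'\ge\alpha$ excludes the degenerate neighbourhood of the origin, so by continuity of $V(p,q,\rho,\alpha)$ in $p$ there exists a strictly smaller $p_0(\alpha,q,\rho)<p_c$ for which Hamming-sphere indicators continue to dominate among admissible $f$. The main obstacle is Step~2: passing from indicators to general nonnegative functions \emph{uniformly in $n$} without losing a dimension-dependent constant. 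Minkowski's integral inequality is not tight in general; closing the gap seems to require either exploiting the hypercube structure of the DSBS semigroup~\eqref{eq:FIhypercube} through a rearrangement step comparable to Neeman's proof of Borell's theorem (Lemma~\ref{lem:diff-R_t}), or a refined information-theoretic argument on the minimal relative entropy region $\mathcal{D}(\pi_{XY})$ establishing that in the LD regime any near-optimal distribution $Q_{X^n}$ must be essentially supported on a single Hamming sphere type class.
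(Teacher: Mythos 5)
A preliminary remark: the paper itself does not prove Conjecture~\ref{conj:pol}; it cites \cite{kirshner2021moment} (for $q=2$), an unpublished note by Polyanskiy, and \cite{yu2021strong,yu2021convexity} (for all $q>1$), and describes the key output of \cite{yu2021strong} as the single-letter characterization of the minimum $p$ as $\alpha/\varphi_q(\alpha)$ with $\varphi_q$ from Definition~\ref{def:varphi_q}. So the relevant benchmark is the proof strategy sketched in Sections~\ref{sec:BL_ineq} and \ref{sec:NICD-Stability}, not a self-contained argument in the monograph.

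Your Step~1 --- rescaled indicators, the identity $\overline{\Ent}_p(c\,\bone_{\calA})=\log(1/\pi_X^n(\calA))$, and the resulting variational problem $V(p,q,\rho,\alpha)=\sup_{\alpha'\ge\alpha}\{\alpha'/p-\Upsilon_{q,\mathrm{LD}}(\alpha')\}$ with achievability by spheres via~\eqref{eq:Upsilon_q_LD} and Sanov --- is correct and is exactly the achievability half of the argument the paper attributes to \cite{yu2021strong}. Your Step~3 threshold discussion is heuristic but points in the right direction: the critical point $p_c=1+\rho^2(q-1)$ is the classical hypercontractivity threshold corresponding to $\alpha\downarrow 0$ (the MD regime), and imposing $\alpha'\ge\alpha>0$ lets you push below it. The paper's precise answer is $p_0=\alpha/\varphi_q(\alpha)$, which you do not derive but which is consistent with your continuity reasoning.

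The genuine gap is Step~2, and you have correctly diagnosed it yourself. Layer-cake plus Minkowski does not close the gap between indicators and general nonnegative $f$ for this problem. Two concrete obstructions: (i) $T_\rho$ and the map $f\mapsto f^p$ do not commute, so pushing the layer-cake of $f^p$ through $\|T_\rho f\|_q$ produces a superposition that Minkowski can only bound one-sidedly, with a loss that is not $o(1)$ per dimension and does not vanish after normalizing by $n$; (ii) the $p$-entropy constraint is a R\'enyi-divergence constraint on the probability measure $Q_{X^n}\propto f^p\pi^n$, which is a nonlinear functional of the level-set profile $t\mapsto\pi_X^n(\{f^p>t\})$, so it does not decompose into per-level constraints $\alpha_t\ge\alpha$. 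The referenced proof avoids the decomposition entirely: one writes $f^p = Q_{X^n}/\pi_X^n$ for a probability mass function $Q_{X^n}$ (exactly as in the proof of Proposition~\ref{prop:ITcharacterization}), invokes the relative-entropy duality of Lemma~\ref{lem:dual} to turn $\log\|T_\rho f\|_q$ into an infimum of linear combinations of relative entropies, single-letterizes via the chain rule and convexity (as in the proofs of Proposition~\ref{prop:tensorization} and Theorem~\ref{thm:ITcharacterization-1}), and then observes that the resulting bound is asymptotically met by type classes, i.e.\ Hamming spheres. That information-theoretic route treats arbitrary $f$ on the same footing as indicators from the start, which is what your Minkowski approach cannot do; replacing it with a rearrangement argument \`a~la Mossel--Neeman would be yet a third strategy, but you would then need a symmetrization lemma specific to the hypercube noise operator, which you have not supplied.
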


This conjecture was confirmed in the affirmative   by 
\citet{kirshner2021moment} for the case $q=2$. As stated in~\cite{kirshner2021moment},
Conjecture~\ref{conj:pol} for all $q>1$ was proved by Polyanskiy in an
unpublished work~\cite{polyanskiy2019hypercontractivity2}. It was
also proven independently by the first author of this monograph in
\cite{yu2021strong,yu2021convexity}. In particular, it was shown
  that Conjecture~\ref{conj:pol}
holds even for $p=1$. Indeed, the asymptotically 
sharp bound on $q(t)$ such that $\|T_{t}^{\otimes n}f\|_{q(t)}\le\|f\|_{p}$
holds for any $f:\{0,1\}^{n}\to[0,\infty)$ with $\frac{1}{n}\overline{\Ent}_{p}(f)\ge\alpha$
can be characterized by the strong BL inequality  derived in the same references. 
Readers may refer to \cite{yu2021strong,yu2021convexity} for the details. 

\chapter{Open Problems} \label{ch:summary}

We have taken a whirlwind tour of classic and contemporary notions related to the common information between two random variables. In this final section, we list some open problems that represent
fertile grounds for future research.


\section{Open Problems Related to  Wyner's Common Information}

We now introduce two open problems related to extensions of Wyner's common information.  

\subsection{R\'enyi Common Information for all Orders}
As shown in Part \ref{part:two}, the (unnormalized and normalized)
R\'enyi common information forms a bridge between Wyner's common information
and the exact common information (see Fig.~\ref{fig:bridge}). The latter two quantities correspond to the R\'enyi common information of order $1$ (normalized)
and order $\infty$ (unnormalized) respectively. Hence, the R\'enyi common information of order
$\alpha \in [0,\infty]$ is a natural generalization of these quantities. However, the complete
characterization of R\'enyi common information of order $\alpha$ 
remains open for a large range of $\alpha$ and sources.  Here by ``complete characterization'', we   refer to 
  providing ``single-letter expressions''. In Theorem~\ref{thm:RCI},
we present upper and lower bounds on the R\'enyi common information for $\alpha\in [0,2]\cup\{\infty\}$.
For $\alpha\in(0,1]$, the unnormalized and normalized R\'enyi common
information of order $\alpha$ are both shown to be equal to Wyner's
common information, and hence, it has been completely characterized. However, for $\alpha\in(1,\infty]$,   the upper and
lower bounds given in Theorem~\ref{thm:RCI} only coincide for some
special cases, e.g., the case for the DSBS and $\alpha=\infty$, and
the case of sources  with Wyner-product distributions (cf.\ Definition~\ref{def:pseudo_pdt}). The complete
characterization of R\'enyi common information for all discrete and continuous
sources and for all orders $\alpha\in(1,\infty]$ is a major open problem on this topic. An interesting special case of this open problem is
Conjecture~\ref{conj:gauss}, which concerns the determination of
the exact common information (or the unnormalized R\'enyi common information
of order~$\infty$) for jointly Gaussian sources.

\subsection{Exact R\'enyi Common Information for all Orders}
 Another interesting observation from Part~\ref{part:two} is that the
\emph{exact}  R\'enyi common information of order $\alpha$ (originally defined in~\eqref{eqn:TEx_comm_ent_alpha})
\begin{equation}
T_\Ex^{(\alpha)}(\pi_{XY}) := \lim_{n\to\infty} \frac{ G_\alpha(\pi_{XY}^n)}{n} . \label{eqn:TEx_comm_ent_alpha_sum}
\end{equation}
 connects   the
exact common information and the nonnegative rank of a matrix; see Corollary \ref{cor:ECI_nnr} and Fig.~\ref{fig:bridge_exact}. Specifically, the exact common information
corresponds to the exact R\'enyi common information of order $1$. Given
a bivariate source $\pi_{XY}$, if we write its distribution as a matrix $\bM$, then
the asymptotic exponent of the nonnegative rank of $\bM^{\otimes n}$ is the exact R\'enyi common information of
order~$0$. Hence, the exact R\'enyi common information of order $\alpha$ in~\eqref{eqn:TEx_comm_ent_alpha_sum} simultaneously 
generalizes both the concepts of the exact common information and the  nonnegative
rank. This inspires us to define the {\em nonnegative $\alpha$-rank} 
\begin{equation}
\rank_+^{(\alpha)} (\bM) := 2^{G_\alpha(\pi_{XY}) } 
\end{equation}
in \eqref{eqn:rank_alpha0} in  Section~\ref{sec:nnar}. This notion  extends the concept
of common information beyond the realm  of  information theory. The complete characterization
of the  exact R\'enyi common information of order $\alpha$ remains 
open. In Corollary~\ref{cor:ECI_nnr}, we  provide a single-letter
expression for the exact R\'enyi common information only for the order $\infty$.
Since the exact common information for the DSBS has been completely
characterized,  the exact R\'enyi common information
of order $1$ for the DSBS is completely characterized as well. The complete
characterization of the exact R\'enyi common information of orders $\alpha\in [0,\infty) \setminus\{1\}$ for the DSBS  and
for $\alpha\in [0,\infty)$ for other sources  remains open. 

%
\section[Open Problems Related to GKW's Common Information]{Open Problems Related to G\'acs--K\"orner--Witsenhausen's \\ Common Information} \sectionmark{Open Problems Related to GWK's Common Information}

In this section, we introduce several interesting open problems on the extensions
of GKW's common information. These extensions mainly concern the   $q$-stability as discussed in Section~\ref{ch:Stability}.  
Recall from  Section~\ref{sec:form_q_stability} that the {\em noise stability} for a Boolean  function
$f:\mathcal{X}^{n}\to\{0,1\}$ with respect to $\rho$ is
 \begin{equation}\mathbf{S}_{\rho}[f]=\mathbb{E}[f(X^{n})f(Y^{n})],\end{equation} where 
$(X^{n},Y^{n})$  is  a source sequence generated by 
a DSBS  with correlation coefficient $\rho\in[0,1]$. This concept can be   extended to  real-valued  functions $f:\mathcal{X}^{n}\to\bbR$. 
For the Gaussian source with correlation coefficient $\rho\in[0,1]$,
the noise stability of $f$ can be defined similarly. When there is
no ambiguity, for Gaussian sources, we also denote the noise stability of a real-valued function $f$ as $\mathbf{S}_{\rho}[f]$. For both
the DSBS and the Gaussian source, the noise stability and the $q$-stability
satisfy the relation
\begin{equation}
\mathbf{S}_{\rho^{2}}[f]=\mathbf{S}_{\rho}^{(2)}[f],
\end{equation}
for any $f$ and $\rho\in(0,1)$. The same equation for the DSBS and Boolean functions is given in~\eqref{eq:stability_stability}. 

We classify   open problems related to 
GKW's common information into three sets according to the underlying sources, namely, the DSBS, the   Gaussian source, and the so-called ball- and sphere-noise source. 

\subsection{The Doubly Symmetric Binary Source}
We introduce four open problems concerning the DSBS.

\subsubsection{Determination of $q_{\min}, q_{\max}$, $\breve{q}_{\min}$ and $\breve{q}_{\max}$}
 One of main open problems on the $q$-stability is the determination of
 the thresholds $q_{\min}$ and $q_{\max}$ for the asymmetric max $q$-stability and  $\breve{q}_{\min}$ and $\breve{q}_{\max}$ for the symmetric max $q$-stability given in Lemma~\ref{lem:BarnesOzgur} due to \citet{barnes2020courtade}. 
Weaker versions of this open problem are stated in Conjectures~\ref{conj:AsymmetricStability} and~\ref{conj:SymmetricStability},
namely, the symmetric and asymmetric versions of the Mossel--O'Donnell,
the Courtade--Kumar, and the  Li--M\'edard conjectures. Although these conjectures
for certain ranges of $(q,\rho)$ have been resolved as discussed
in Section~\ref{sec:balanced}, other cases   remain wide open. These
conjectures are significant since they connect several different  fields including
discrete Fourier analysis, information theory,  discrete probability, etc. 
Among these conjectures,    the Courtade--Kumar conjecture is regarded as one of
the most fundamental conjectures at the interface of information theory
and the analysis of Boolean functions.

\subsubsection{Optimality of Majorities}
 The  general open problem as discussed above on the $q$-stability appears to be intractable with the current set of analytical tools. 
A possible strategy to make some progress is  to first find the structure of the optimal solutions attaining the max $q$-stability, and according to this observation,  to prove that the optimal solutions belong to a small class of functions. If functions in this class are well-behaved,  it is then relatively easy to deduce which Boolean functions \emph{in this class} maximize  the $q$-stability. It has been observed that  for odd  dimensions 
$n$ and mean $a=1/2$, the family of majority functions (defined in Section~\ref{sec:form_q_stability}), which is well-behaved,
may be a plausible candidate, since both   dictator functions and   indicators of Hamming balls are  majority functions. 
 Hence, it was conjectured by \citet{mossel2005coin} that 
$\mathrm{Maj}_{n}$ minimizes the symmetric $q$-stability  over
all \emph{anti-symmetric} Boolean functions. We state this formally as follows. 

 \begin{conjecture}[Optimality of majorities] \label{conj:opt_major}
Consider the DSBS  with correlation coefficient $\rho\in(0,1)$.
Fix $q>1$ and $n$ odd. Then, $\breve{\mathbf{S}}_{\rho}^{(q)}[f]$
is maximized among anti-symmetric Boolean functions $f$ by a majority
function $\mathrm{Maj}_{m}$ for  some odd number $m\in[n]$.
 \end{conjecture}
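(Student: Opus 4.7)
The natural starting point is to exploit the anti-symmetry constraint $f(x^{n})+f(\bar{x}^{n})=1$, which forces the Fourier expansion $f=\frac{1}{2}+\sum_{S:\,|S|\text{ odd}}\hat{f}_{S}\chi_{S}$ to be supported on odd-size sets, and which guarantees via~\eqref{eq:stability-asym-func} that $\breve{\mathbf{S}}_{\rho}^{(q)}[f]=2\mathbf{S}_{\rho}^{(q)}[f]$. The plan is therefore to maximize $\mathbf{S}_{\rho}^{(q)}[f]=\mathbb{E}[(T_{\rho}f)^{q}]$ over anti-symmetric Boolean $f$ in three stages: (i) localize any near-optimal $f$ to a junta on a bounded number of coordinates, (ii) use a rearrangement/shifting argument to reduce to a junta symmetric under the permutation group of its relevant coordinates, and (iii) among such symmetric anti-symmetric Boolean juntas, recognize that only majority functions $\mathrm{Maj}_m$ (for odd $m$) remain.

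For step~(i), I would develop a quantitative FKN-type junta theorem tailored to the $q$-stability. Since $(T_{\rho}f)(y^{n})=\sum_{S}\rho^{|S|}\hat{f}_{S}\chi_{S}(y^{n})$, the low-degree Fourier weights dominate $\mathbf{S}_{\rho}^{(q)}[f]$ for $\rho\in(0,1)$, and Theorems~\ref{thm:StabDerivative} and~\ref{thm:StabDerivative-1-1} already show that in the limiting regimes $\rho\uparrow 1$ and $\rho\downarrow 0$ the extremizers are subcubes (hence $\mathrm{Maj}_1$-type). For intermediate $\rho$, I would combine the hypercontractivity inequality~\eqref{eq:NICDFHC-1} with a bootstrapping argument on the tail $\sum_{|S|>k}\hat{f}_{S}^{2}$ to show that the Fourier mass of any near-optimal anti-symmetric $f$ is concentrated on $S\subset J$ for some $J\subset[n]$ of size $|J|\le k(q,\rho)$; combined with the Boolean constraint and an FKN-style ``rounding,'' this pins $f$ to an honest $k$-junta.

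For step~(ii), once $f=\bone_{\calA}$ is a $k$-junta on $J$, at a maximizer any transposition of two coordinates in $J$ yields a function with the same $q$-stability, so by a standard extremal/compression argument one may take $f^{*}$ to be invariant under the full symmetric group on $J$ (or on some sub-junta of $J$). Anti-symmetric symmetric Boolean functions on~$m$ coordinates (with $m\le k$ odd) satisfy $g(w)+g(m-w)=1$ where $g$ encodes $f$'s dependence on the Hamming weight $w$, and a two-point shifting argument à la Kleitman—implemented pairwise between sequences differing in exactly two positions and verified by a convexity inequality for $t\mapsto t^{q}$ applied to $T_{\rho}f$—would force $g$ to be monotone, pinning $f^{*}=\mathrm{Maj}_{m}$ for some odd $m\le k$. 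The optimal $m$ would then be read off from the explicit formula for $\mathbf{S}_{\rho}^{(q)}[\mathrm{Maj}_{m}]$, whose comparative behavior across $m$ is controlled by Laguerre's bound on zeros of sums of exponentials, as already exploited in Lemma~\ref{lem:BarnesOzgur}.

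The principal obstacle is step~(i): the existing FKN-type results are sharp only at $q=2$ and degrade as $q$ grows, precisely in the regime where Proposition~\ref{prop:counterexample} shows that $\mathrm{Maj}_{3}$ strictly beats both $\mathrm{Maj}_{1}$ and $\mathrm{Maj}_{5}$. A useful junta theorem must therefore be \emph{$q$-sensitive} and permit $k=k(q,\rho)$ to grow mildly with $q$, while still excluding non-majority juntas. A secondary obstacle is the shifting in step~(ii): ordinary two-point swaps must be verified to not decrease $\mathbf{S}_{\rho}^{(q)}[f]$ under the correlated measure $\pi_{XY}^{n}$, and, unlike for edge-isoperimetric problems, the nonlinearity of $t\mapsto t^{q}$ means each local swap inequality is itself a nontrivial convexity statement that may fail unless anti-symmetry is carefully used. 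These two technical barriers—a $q$-adapted junta theorem and a compatible shifting lemma—are where I would expect the bulk of the work to lie, and are also the reasons the conjecture has resisted resolution since~\cite{mossel2005coin}.
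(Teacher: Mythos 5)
This statement is a conjecture, not a theorem: the paper attributes it to \citet{mossel2005coin} and lists it explicitly among the open problems, so there is no proof in the paper to compare against. Your outline is candid that it is a roadmap rather than a proof, and the two barriers you flag are real; but there is a more basic structural problem that you understate, and it makes step~(i) the wrong shape for this conjecture rather than merely technically hard.

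Step~(i) asks for a $q$-sensitive junta theorem with junta size $k(q,\rho)$ independent of~$n$. Such a theorem would \emph{falsify} the conjecture rather than help prove it: the conjecture allows the maximizer to be $\mathrm{Maj}_m$ for \emph{any} odd $m\le n$, including $m=n$, and $\mathrm{Maj}_n$ is not close in Fourier mass to any $k$-junta for $k=o(n)$---its degree-$1$ coefficients are equal and of size $\Theta(1/\sqrt{n})$ across all $n$ coordinates, so no FKN-type rounding to a small junta is possible. Proposition~\ref{prop:counterexample} shows that the optimal $m$ is not monotone in $(q,\rho)$, and by~\eqref{eq:StabMAJn} together with Theorem~\ref{thm:borell} the sequence $\mathbf{S}_\rho^{(q)}[\mathrm{Maj}_m]$ approaches the Gaussian halfspace value as $m\to\infty$; nothing currently rules out that the optimal $m$ grows with $n$ for some $(q,\rho)$. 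Any usable localization must therefore allow $k$ comparable to $n$, at which point step~(i) gives nothing and the full burden lands on step~(ii). Step~(ii), in turn, asks for a two-point compression inequality under the correlated measure $\pi_{XY}^n$ for the \emph{nonlinear} functional $\bbE[(T_\rho f)^q]$; as you observe, this is not a local swap inequality in the usual combinatorial sense, and no such lemma is known even when anti-symmetry is imposed. In short, the plan does not close the gap even in outline---it rediscovers precisely the two obstructions that have kept the conjecture open since \citet{mossel2005coin}.
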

 
In the original conjecture \cite{mossel2005coin}, $q$ was restricted
to be a positive integer. Conjecture~\ref{conj:opt_major}  is weaker than what we hope to resolve the max $q$-stability problem, since only {\em anti-symmetric} Boolean functions are considered in this conjecture, instead of all balanced Boolean functions. Indeed, one can consider a more general
question whether $\breve{\mathbf{S}}_{\rho}^{(q)}[f]$ is maximized
by a majority function
$\mathrm{Maj}_{m}$   among all \emph{balanced} Boolean functions. If the answer is  affirmative, it would have significant implications  
in addressing the max $q$-stability problem in the sense that it allow us to
focus our attention  {\em only  on majority functions}.
  
\subsubsection{Stability of Majorities under Bounds on Coefficients}  
It is also interesting to investigate the noise stability for a specific class of  Boolean functions, e.g.,  the class of functions whose  influences or Fourier coefficients are constrained.  
It is well known that for the majority function $\mathrm{Maj}_{n}$,
all of its Fourier coefficients vanish as $n\to\infty$; see, e.g.,
\cite{ODonnell14analysisof}. On the other hand, the noise stability
$\mathbf{S}_{\rho}[\mathrm{Maj}_{n}]$ of $\mathrm{Maj}_{n}$ for
the DSBS with correlation coefficient $\rho\in(0,1)$ satisfies
\begin{equation}
\lim_{n\to\infty}\mathbf{S}_{\rho}[\mathrm{Maj}_{n}]=\frac{1}{4}+\frac{\arcsin \rho}{2\pi}. \label{eqn:maj_dsbs}
\end{equation}
This can be shown similarly to~\eqref{eq:NICDCL_ball} and~\eqref{eq:half_half} in which $a$ and $b$ are set to $1/2$ and one uses the central limit theorem to  approximate the DSBS by a jointly Gaussian source. 
It has been 
conjectured by \citet{mossel2010noise} that for all balanced Boolean
functions $f$ with small Fourier coefficients, the noise stability
$\mathbf{S}_{\rho}[f]$ cannot exceed the right-hand side of~\eqref{eqn:maj_dsbs} ``by too much''. This is quantified in the following conjecture.

 \begin{conjecture}[Majority is most stable under bounds on the Fourier coefficients]
Consider the DSBS $\pi_{XY}$ with correlation coefficient $\rho\in(0,1)$
and let $(X^{n},Y^{n})\sim \pi_{XY}^{n}$. Let $f:\{0,1\}^{n}\to\{0,1\}$
be an arbitrary balanced function, i.e., $\mathbb{E}[f]=2^{-n}\sum_{x^n \in \{0,1\}^{n}}f(x^n) ={1}/{2}$.  Then,
\begin{equation}
\mathbf{S}_{\rho}[f]\le\frac{1}{4}+\frac{\arcsin \rho}{2\pi} +\eps_{\rho}\Big(\max_{S\subset[n]}|\hat{f}_{S}|\Big),
\end{equation}
where $\eps_{\rho}(\delta)\downarrow 0$ as $\delta \downarrow 0$ for each fixed
$\rho\in(0,1)$.  \end{conjecture}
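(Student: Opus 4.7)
The plan is to adapt the Mossel--O'Donnell--Oleszkiewicz (MOO) proof of the Majority is Stablest theorem to the setting where the max-influence hypothesis is replaced by a max-Fourier-coefficient hypothesis. The target constant $\tfrac{1}{4}+\tfrac{\arcsin\rho}{2\pi}$ equals $\Lambda_\rho(\tfrac{1}{2},\tfrac{1}{2})$ from~\eqref{eq:NICDcopula} and, by Borell's isoperimetric theorem (Theorem~\ref{thm:borell_nicd}) particularized to $a=b=\tfrac{1}{2}$, is precisely the Gaussian noise stability of a balanced halfspace through the origin. The strategy is therefore to transport $f$ to the Gaussian setting via a multilinear invariance principle and then invoke Borell's bound.

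First, I would perform a Fourier truncation $f=f^{\le d}+f^{>d}$ at degree $d=d(\rho,\delta)$ to be chosen. The high-degree tail contributes negligibly to the stability, since
\begin{equation}
\sum_{k>d}\rho^{k}\mathbf{W}_{k}[f]\le\rho^{d}\sum_{k>d}\mathbf{W}_{k}[f]\le\rho^{d}/4,
\end{equation}
using $\sum_{k}\mathbf{W}_{k}[f]=\bbE[f^{2}]=\bbE[f]=\tfrac{1}{2}$ and $\mathbf{W}_{0}[f]=\tfrac{1}{4}$ (Lemma~\ref{lem:W1}). Next, I would apply the MOO multilinear invariance principle to $f^{\le d}$, replacing the Rademacher inputs by i.i.d.\ standard Gaussians and the Boolean noise operator $T_{\rho}$ by the Ornstein--Uhlenbeck operator $P_{-\ln\rho}$ from Section~\ref{subsec:NICD-Gaussian}. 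The resulting Gaussian polynomial, suitably clipped to $[0,1]$, has stability bounded by the functional Borell inequality (Theorem~\ref{thm:functional-borell}), delivering the constant $\tfrac{1}{4}+\tfrac{\arcsin\rho}{2\pi}$ up to an error controlled by the maximum low-degree influence $\tau:=\max_{i}\sum_{S\ni i,\,|S|\le d}\hat{f}_{S}^{\,2}$.

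The main obstacle is converting the max-Fourier hypothesis $\delta:=\max_{S}|\hat{f}_{S}|$ into a useful bound on $\tau$. A naive Cauchy--Schwarz estimate yields $\tau\le\delta^{2}\sum_{k=0}^{d-1}\binom{n-1}{k}=O_{d}(\delta^{2}n^{d-1})$, which vanishes only in a dimension-dependent regime incompatible with a bound of the form $\epsilon_{\rho}(\delta)$. To reach a dimension-free conclusion I would pre-smooth $f$ to $g:=T_{\eta}f$ for $\eta=\eta(\rho,\delta)>0$ and invoke Bonami--Beckner hypercontractivity (Theorem~\ref{thm:hyper_single}) together with level-$k$ inequalities of Kahn--Kalai--Linial type to bound $\mathbf{I}_{i}^{\le d}[g]$ by a quantity depending only on $\delta$ and $\eta$. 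Since $|\mathbf{S}_{\rho}[f]-\mathbf{S}_{\rho}[g]|=O(\eta)$ and Fourier weights of $g$ are damped by $\eta^{2|S|}$, jointly optimizing $\eta$ and $d$ should then close the chain. The delicate point, which I expect to be the hard part, is ruling out ``tribes-like'' configurations in which many moderately sized Fourier coefficients accumulate on a single coordinate while no individual $\hat{f}_{S}$ is large: this is precisely where the MOO max-influence framework and the present max-Fourier-coefficient framework diverge, and where quantitative Friedgut--Kalai--Naor-style rigidity arguments beyond the classical proof appear to be required.
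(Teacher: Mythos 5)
This statement is an \emph{open conjecture}, not a theorem: the paper places it in the Open Problems section, attributes it to \citet{mossel2010noise}, and explicitly notes that only the weaker variant in which $\max_{S\subset[n]}|\hat f_{S}|$ is replaced by the maximal influence $\max_{i\in[n]}\mathbf{I}_{i}[f]$ has been resolved (that weaker statement is ``Majority is Stablest''). There is therefore no proof in the paper to compare your attempt against, and your proposal does not constitute one either.

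What you have written correctly reconstructs the standard Mossel--O'Donnell--Oleszkiewicz pipeline (truncate, invariance principle, Borell), and you are right that the target constant $\tfrac14+\tfrac{\arcsin\rho}{2\pi}=\Lambda_{\rho}(\tfrac12,\tfrac12)$ is the Gaussian halfspace stability from Theorems~\ref{thm:borell_nicd} and~\ref{thm:functional-borell}. You also correctly locate the obstruction: the invariance principle needs small low-degree \emph{influences}, and the hypothesis $\max_{S}|\hat f_{S}|\le\delta$ does not control these in a dimension-free way. Your proposed repair does not close the gap. After pre-smoothing, $\mathbf{I}_{i}^{\le d}[T_{\eta}f]=\sum_{S\ni i,\,|S|\le d}\eta^{2|S|}\hat f_{S}^{2}$, and under the hypothesis alone this can be as large as $\delta^{2}\sum_{k=1}^{d}\eta^{2k}\binom{n-1}{k-1}$, which is dimension-dependent for any fixed $\eta<1$; hypercontractivity and KKL/level-$k$ inequalities control total Fourier weight at each level, not per-coordinate concentration, so they do not rescue the estimate. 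More fundamentally, $\max_{S}|\hat f_{S}|$ small does not imply any coordinate influence is small (e.g.\ $f(x)=x_{1}\oplus g(x_{2},\dots,x_{n})$ for a suitable balanced $g$ has tiny Fourier coefficients but $\mathbf{I}_{1}[f]=1$), so a wholesale reduction to the max-influence theorem is impossible and one genuinely needs new structural input. Your closing sentence concedes exactly this: the decisive step ``appears to require'' FKN-type rigidity beyond the classical proof. That is correct, and it is precisely why the statement remains a conjecture. You should present this as an expository account of why the problem is open, not as a proof.
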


A weaker version of this conjecture
in which $\max_{S\subset[n]}|\hat{f}_{S}|$ is replaced by the maximal influence
$\max_{i\in[n]}\mathbf{I}_{i}[f]$ was resolved in \cite{mossel2010noise}.

\subsubsection{Extracting a Constant or Sublinear Number of Bits}
In GKW's common information, the number of
bits that is required to be extracted from a  source $(X,Y)\sim \pi_{XY}$ is {\em linear} in the dimension or blocklength $n$.
  In contrast, in the Non-Interactive Correlation Distillation (NICD) or the max $q$-stability problem, only
a {\em single} or a {\em pair} of random bits is to be extracted. A natural generalization
of these two problems is to consider the ``intermediate regime'' in which the number of bits that one hopes to extract is more than one but sublinear in $n$.  For example, one may wish to extract  (a constant) $\ell\ge2$ bits by using
a function $f:\{0,1\}^{n}\to\{0,1\}^{\ell}$ applied to $X^n$ and $Y^n$ where $(X^n,Y^n)$ is a source sequence generated by a DSBS.  We require $f$ to be balanced (i.e., $f(X^{n})\sim\Unif\{0,1\}^{\ell}$),
and at the same time, we aim to maximize  the {\em agreement probability}
\begin{equation}
\Pr\big(f(X^{n})=f(Y^{n})\big)=\sum_{u^{\ell}\in\{0,1\}^{\ell}}\pi_{XY}^{n}\big( \mathcal{A}_{u^{\ell}}\times\mathcal{A}_{u^{\ell}}\big) , \label{eqn:agree_prob}
\end{equation}
where $\mathcal{A}_{u^{\ell}}=f^{-1}(u^{\ell})  = \{x^n\in\{0,1\}^n: f(x^n)=u^\ell\}$ for $u^{\ell}\in\{0,1\}^{\ell}$ and
$\pi_{XY}$ is the distribution of the DSBS with correlation coefficient
$\rho \in (0,1)$. In other words, we wish to find a partition   $\{\mathcal{A}_{u^{\ell}} \}_{u^{\ell}\in\{0,1\}^\ell}$
of $\{0,1\}^{n}$ such that each subset $\calA_{u^\ell}$ has the same cardinality and~\eqref{eqn:agree_prob} is maximized. 
If we na\"ively output the first $\ell$ bits $x^{\ell}$ by
using the function $f(x^{n})=  x^{\ell}$---an indicator of an $(n-\ell)$-subcube (cf.\ Section~\ref{sec:subcubes})---then the induced agreement
probability is exactly $\bigl(\frac{1+\rho}{2}\bigr)^{\ell}$. Indeed, as a consequence
of our solution to (the forward part of) Mossel's mean-$1/4$ stability problem
given in Section~\ref{sec:quarter}, for $\ell=2$, the function $f$ that outputs the first two bits attains
the maximum of the agreement probability for this problem. In addition,
by using the hypercontractivity inequalities in Theorem~\ref{thm:hyper_single}, \citet{bogdanov2011extracting}
showed that  the maximal agreement probability 
\begin{equation}
\max_{f : \{0,1\}^n\to\{0,1\}^\ell }\Pr\big(f(X^{n})=f(Y^{n})\big)\le 2^{-\big(\frac{1-\rho}{1+\rho}\big) \ell}.
\end{equation}
This upper bound
is asymptotically tight as $\ell\to\infty$ in the sense that  there exists $f: \{0,1\}^n\to\{0,1\}^\ell$ such that 
\begin{equation}
\Pr\big(f(X^{n})=f(Y^{n})\big)\ge 0.003\, \sqrt{ \frac{2}{(1-\rho)\,\ell}}  \, 2^{-\big(\frac{1-\rho}{1+\rho}\big)\ell}.
\end{equation}
Thus, the exponents  of the lower and upper bounds coincide and are equal to $\frac{1-\rho}{1+\rho}$. 
Determining the {\em exact} value of the
maximum agreement probability over all balanced $\{0,1\}^{\ell}$-valued
functions with fixed $\ell\ge3$ remains open.

\subsection{Gaussian Sources}
We next  introduce two open problems for bivariate  Gaussian sources. 

\subsubsection{Standard Simplex Conjecture}

We now consider a Gaussian version of the noise stability problem for   balanced $[m]$-valued functions. This problem is analogous  to its DSBS counterpart for $\{0,1\}^{\ell}$-valued functions. 
In the Gaussian version, we  extract a pair of random variables $U=f(X^{n})$ and $V=f(Y^{n})$
by using a deterministic (measurable) map  $f:\bbR^{n}\to[m]$ from
a pair of length-$n$ vectors $(X^{n},Y^{n})$ 
drawn from 
a bivariate Gaussian source with correlation coefficient $\rho\in(0,1)$.
We require $f$ to be balanced in the sense that $U$, or equivalently
$V$, is  uniformly distributed on $[m]$. We aim to maximize the {\em agreement probability} 
\begin{equation}
\Pr(U=V)=\sum_{i=1}^{m}\pi_{XY}^{n}\big(\mathcal{A}_{i}\times\mathcal{A}_{i}\big) \label{eqn:agree_gauss}
\end{equation}
with $\mathcal{A}_{i}=f^{-1}(i)$ for $i\in[m]$. For this $[m]$-valued
function version of Gaussian NICD problem, \citet{isaksson2012maximally} posed
the  \emph{standard simplex conjecture}. Before stating it, we have to introduce some terminology. 

A {\em flat} or {\em simplex partition} $\{\calA_i\}_{i=1}^m$ of $\bbR^n$ is one in which there exists vectors $\ba_0\in\bbR^n$ and $\{\ba_i\}_{i=1}^m\subset\bbR^n\setminus \{\bzero\}$ such that 
\begin{itemize}
\item for all $i , j \in [m]$ such that $i\ne j$, $\ba_i$ is not a positive multiple of~$\ba_j$;
\item for all $i \in [m]$, 
\begin{equation}
\mathcal{A}_{i} = \ba_0+\Big\{ \bx \in\bbR^{n}:\big\langle \ba_{i} ,\bx\big\rangle =\max_{j \in [m]}\big\langle \ba_{j} ,\bx\big\rangle\Big\} .
\end{equation}
\end{itemize}
 A \emph{standard
simplex partition} is a flat partition $\{\mathcal{A}_{i}\}_{i=1}^m$ where $\|\ba_i\|_2=1$ for all $i$ and $\langle \ba_i, \ba_j \rangle= -\frac{1}{m-1}$ for all $i\ne j$. 

\begin{conjecture}[Standard simplex conjecture] \label{conj:std_simplex}
Consider the bivariate Gaussian source $(X,Y) \sim \pi_{XY}$ with correlation
coefficient $0<\rho<1$. Then, among all partitions $\{\mathcal{A}_{i}\}_{i=1}^m$
of $\bbR^{n}$ into $3\le m\le n+1$ measurable parts of equal $\pi_{XY}^n$-probability (i.e., $\pi_{XY}^n(\calA_i)=1/m$ for all $i\in [m]$), the agreement probability  in~\eqref{eqn:agree_gauss}  is \emph{maximized} by standard simplex partitions. Furthermore, for
$-1<\rho<0$, standard simplex partitions \emph{minimize} the agreement probability in~\eqref{eqn:agree_gauss}.
\end{conjecture}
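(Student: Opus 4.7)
The plan is to attack Conjecture~\ref{conj:std_simplex} via a functional relaxation together with an Ornstein--Uhlenbeck interpolation argument in the spirit of Mossel and Neeman's proof of Borell's theorem (Theorem~\ref{thm:functional-borell}). First, I would lift the hard partition $\{\calA_i\}_{i=1}^m$ to a vector-valued function $\mathbf{f} = (f_1,\ldots,f_m): \bbR^n \to \Delta_{m-1}$ taking values in the probability simplex, subject to the marginal constraints $\bbE[f_i(X^n)] = 1/m$ for each $i \in [m]$, and focus on the symmetric functional $\sum_{i=1}^m \bbE[f_i(X^n) f_i(Y^n)]$. The target is to identify a kernel $J: \Delta_{m-1} \times \Delta_{m-1} \to \bbR$ which (i) agrees with $\sum_i u_i v_i$ on the vertices of $\Delta_{m-1}$, and (ii) obeys a sharp functional inequality
\begin{equation*}
\bbE\bigl[J(\mathbf{f}(X^n), \mathbf{f}(Y^n))\bigr] \le J^\star,
\end{equation*}
with $J^\star$ attained by the pushforward of the standard simplex partition. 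Specializing to indicator-valued $\mathbf{f}$ would then immediately yield Conjecture~\ref{conj:std_simplex}, in direct analogy with how Theorem~\ref{thm:functional-borell} implies Theorem~\ref{thm:borell_nicd}.

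Second, the core technical step is the semigroup interpolation. Let $\mathbf{f}_t := P_t \mathbf{f}$ be applied componentwise, with $P_t$ the Ornstein--Uhlenbeck operator in~\eqref{eq:P_t}, and consider
\begin{equation*}
R_t := \bbE\bigl[J(\mathbf{f}_t(X^n), \mathbf{f}_t(Y^n))\bigr].
\end{equation*}
Since $\mathbf{f}_0 = \mathbf{f}$ and $\mathbf{f}_\infty \equiv (1/m,\ldots,1/m)$, establishing monotonicity of $R_t$ would connect the partition agreement probability at $t=0$ to an explicit constant at $t=\infty$, from which the extremal role of the standard simplex would emerge after the normalization $\pi_{X|Y}^n$. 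For the maximization regime $\rho>0$, the natural candidate for $J$ is the Gaussian orthant-type probability
\begin{equation*}
J(\mathbf{u},\mathbf{v}) := \Pr\bigl(\Psi(U) = \Psi(V)\bigr),
\end{equation*}
where $(U,V)$ is a correlated Gaussian pair in $\bbR^{m-1}$ with an equiangular covariance determined by $\rho$, and $\Psi: \bbR^{m-1} \to [m]$ is the standard simplex classifier sending each point to its closest vertex. Differentiating $R_t$ along the lines of Lemma~\ref{lem:diff-R_t}, the sign of $\mathrm{d}R_t/\mathrm{d}t$ should reduce to a quadratic form in the gradients $\nabla \mathbf{f}_t$ contracted against the Hessian of $J$.

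The main obstacle---and the reason this conjecture has withstood attack for more than a decade---is precisely the semidefiniteness of that Hessian. For $m=2$, the bivariate copula $\Lambda_\rho$ has a transparent $2\times 2$ structure and the sign is controlled directly via $\|\nabla v_t - \nabla w_t\|^2$, as in Lemma~\ref{lem:diff-R_t}. For $m\ge 3$, however, the mixed second partials of the orthant probability above on the interior of $\Delta_{m-1}$ do not appear to carry a constant sign under elementary manipulations; worse, the standard simplex is known to cease being optimal for $m > n+1$, so any valid proof must crucially exploit the dimension hypothesis $m \le n+1$. The hard part will therefore be either to establish the required Hessian inequality via a change of variables that exploits the permutation symmetry of the simplex---for instance, block-diagonalizing the Hessian under the action of $S_m$ and controlling each isotypic component separately---or to discover an alternative potential $R_t$ whose monotonicity is more tractable yet recovers the same endpoint at $t=\infty$.

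If the semigroup approach stalls, I would fall back on two complementary strategies. First, following the spirit of Isaksson and Mossel, I would impose spherical symmetrization: restrict attention to partitions of $\bbR^n$ invariant under the stabilizer subgroup of the simplex vertices, reducing the problem to a finite-parameter variational one that can be attacked by calculus combined with the Gaussian hypercontractivity bounds of Section~\ref{subsec:NICD-Gaussian}. Second, in the perturbative regimes $\rho \downarrow 0$ and $\rho \uparrow 1$, I would adapt the derivative-based techniques of Theorems~\ref{thm:StabDerivative} and~\ref{thm:StabDerivative-1-1}, replacing discrete influences by their Gaussian (Hermite) analogues, to verify \emph{local} optimality of the standard simplex at both endpoints; combined with a continuity/compactness argument in $\rho$, this might close the gap for small and large correlations and point toward the correct global proof strategy.
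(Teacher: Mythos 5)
Conjecture~\ref{conj:std_simplex} is stated in this monograph as an \emph{open problem}: the text explicitly notes that it has only been confirmed for $m=3$ and sufficiently small $\rho$ (by Heilman), and ``remains open'' otherwise. There is therefore no proof in the paper to compare your attempt against, and your submission is not a proof either --- it is a proof \emph{strategy}. You should not present it as if it resolves the conjecture.

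That said, as a research plan your proposal is well-calibrated. The functional relaxation to simplex-valued $\mathbf{f}$, the Ornstein--Uhlenbeck interpolation of a quantity $R_t = \bbE[J(\mathbf{f}_t(X^n),\mathbf{f}_t(Y^n))]$, and the reduction of $\mathrm{d}R_t/\mathrm{d}t$ to a sign condition on (essentially) the Hessian of $J$ is exactly how the $m=2$ case (Borell's theorem, via Theorem~\ref{thm:functional-borell} and Lemma~\ref{lem:diff-R_t}) is handled, and it is the natural candidate route for $m\ge 3$. You correctly identify the genuine obstruction: for $m\ge 3$, nobody has been able to choose $J$ so that the relevant quadratic form in $\{\nabla f_{i,t}\}$ has a constant sign across the interior of $\Delta_{m-1}$, and this is precisely why the problem has resisted the Mossel--Neeman machinery. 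You also correctly flag that any proof must use $m\le n+1$ (a regular $m$-simplex requires $m-1$ ambient dimensions), which is an important sanity check absent from many na\"ive attempts. Your fallback strategies are also consistent with the literature: the perturbative $\rho\downarrow 0$ analysis via first-variation/derivative bounds (the Gaussian analogue of Theorems~\ref{thm:StabDerivative} and~\ref{thm:StabDerivative-1-1}) is indeed the mechanism behind Heilman's partial result, and symmetrization reductions are the standard way to cut down the variational problem. None of these steps, however, has been carried out to completion for general $m$ and $\rho$, so the honest conclusion of your write-up should be that it identifies a plausible attack and the known barrier, not that it yields the conjecture.
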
 

This conjecture was confirmed positively by \citet{heilman2014euclidean}
for the case $m=3$ and in the low correlation (i.e., small $\rho$) regime. Specifically, for $m=3$
and $n\ge2$, there exists a function $\rho_{0}(n)>0$ such
that the conjecture holds for $0<\rho<\rho_{0}(n)$. However, for
other cases, Conjecture~\ref{conj:std_simplex} remains open.

\subsubsection{Symmetric Gaussian Problem}
Recall that in the NICD  and  the max $q$-stability problem for the Gaussian case with mean $a=1/2$ (cf.\ Sections~\ref{subsec:NICD-Gaussian}
 and~\ref{sec:gauss_stab}), indicators of parallel halfspaces  attain the forward joint probability  and  the max $q$-stability.  Indicators of  halfspaces  are  {\em anti-symmetric}
(or {\em odd}) in the sense that $f(x^n) = 1-f(-x^n)$ for almost all $x^n \in \bbR^n$; see the analogous definition for functions defined on $\{0,1\}^n$ in Section~\ref{sec:form_q_stability}. It is interesting to ask which  {\em symmetric}
(or {\em even}) functions i.e., those that satisfy $f(x^n) = f(-x^n)$ for almost all $x^n \in \bbR^n$,  maximize the joint probability in the NICD problem
or the $q$-stability in the max $q$-stability problem. For Gaussian
sources, \citet{chakrabarti2012optimal} posed the following problem.

 \begin{problem}[Symmetric Gaussian problem] Fix $0<\rho,a,b<1$.
Let $\mathcal{A}\subset\bbR^n$ and $\mathcal{B}\subset\bbR^{n}$ have Gaussian measures
$a$ and $b$, respectively. Furthermore, suppose $\mathcal{A}$ is {\em centrally
symmetric}, i.e., $\mathcal{A}=-\mathcal{A}$. What is the maximum possible
value of $\Pr(X^{n}\in\mathcal{A},Y^{n}\in\mathcal{B})$, where $X^{n}$ and~$Y^{n}$
are $\rho$-correlated $n$-dimensional standard Gaussian vectors?
\end{problem} 
Even though the problem statement requires that $\calA$ is centrally symmetric, this problem is equivalent to requiring that {\em both} $\calA$ and~$\calB$ are centrally symmetric 
\cite{filmus2014real}. Indeed, given a set $\calA$, the optimal $\calB$ that maximizes $\pi_{XY}(\calA\times\calB)$ under the constraint $\pi_{Y}(\calB)=b$ is the set of $y$ such that $\frac{\rmd \pi_{Y|X}(y|\calA)}{\rmd \pi_Y(y)} \ge \lambda$ for some  $\lambda>0$.
Hence, if $\calA$ is centrally symmetric, so is the optimal $\calB$  since for this case,  
\begin{align}
\frac{\rmd \pi_{Y|X}(y|\calA)}{\rmd \pi_Y(y) }= \frac{\rmd \pi_{Y|X}(-y|\calA)}{\rmd \pi_Y(-y)} . 
\end{align}
It was conjectured in \citet{chakrabarti2012optimal} and \citet{o2012open} that $\Pr(X^{n}\in\mathcal{A}, Y^{n}\in\mathcal{B})$
is maximized by $(\bbB_{r},\bbB_{s})$ or by $(\bbB_{r}^{\rmc},\bbB_{s}^{\rmc})$
for some appropriate $r,s>0$, where $\bbB_{r}=\{x^{n}\in\bbR^{n} :\|x^{n}\|_{2}\le r\}$ denotes the ball centered at the origin with radius~$r$. This conjecture was, however, disproved by \citet{heilman2021low} in  dimensions two and higher.

\subsection{Ball- and Sphere-Noise Sources}
Up to this point, only {\em memoryless} sources  or, equivalently,  {\em product} distributions have been discussed. Extending the   NICD and $q$-stability problems to sources with memory  is a more challenging but fruitful endeavor, which may provide unique insights. For simplicity, here we consider {\em ball-} and {\em sphere-noise sources} since they behave similarly to   memoryless sources in some sense. Hence, results on  memoryless sources can be applied to ball- and  sphere-noise sources.

\subsubsection{NICD for Ball- and Sphere-Noise Sources} 

We first consider the ball-noise stability problem. Let  $\pi_{X^{n}Y^{n}}$
be  a joint distribution on $\{0,1\}^{n}\times\{0,1\}^{n}$ such that
$\pi_{X^{n}}=\mathrm{Unif}\{0,1\}^{n}$ and 
\begin{align}
Y^{n}=X^{n}\oplus Z^{n}=(X_{i}\oplus Z_{i})_{i\in [n]},    
\end{align}
where $Z^{n}\sim\mathrm{Unif}(\bbB_{r})$ is independent of $X^{n}$
and $\oplus$ denotes the modulo-$2$ sum. Here, $\bbB_{r}$ is the Hamming
ball centered at $0^{n}$ with radius $r$ (cf.\ Section~\ref{sec:ball}). The distribution $\pi_{X^nY^n}$ is no longer of a product form since the coordinates of $(X^n, Y^n)$  are correlated through the entries of $Z^n$.  For $(n,r,M)$
such that $1\le r\le n$ and $1\le M\le2^{n}$, define the \emph{forward
joint probability for $\pi_{X^{n}Y^{n}}$} (or \emph{maximal ball-noise
stability}) as 
\begin{align}
\Gamma_{\mathrm{Ball}}^{(n)}(M,r) & :=\max_{\substack{f:\{0,1\}^{n}\to\{0,1\}\\
\Pr ( f(X^{n})=1 )=a
}
}\Pr\bigl(f(X^{n})=f(Y^{n})=1\bigr),\label{eq:-102-1}
\end{align}
where $(X^{n},Y^{n})\sim \pi_{X^{n}Y^{n}}$ and $a=M/2^{n}$. For fixed
$a,\beta\in(0,1)$, define their upper and lower limits for {\em even} radii as $n\to\infty$
as 
\begin{align}
\overline{\Gamma}_{\mathrm{Ball,\,even}}^{(\infty)}(a,\beta) & :=\limsup_{n\to\infty}\;\Gamma_{\mathrm{Ball}}^{(n)} \bigg(\lfloor a2^{n}\rfloor ,2\Big\lfloor \frac{\beta n}{2}\Big\rfloor \bigg)\quad\mbox{and}\label{eq:-107-1-1}\\
\underline{\Gamma}_{\mathrm{Ball,\,even}}^{(\infty)}(a,\beta) & :=\liminf_{n\to\infty}\;\Gamma_{\mathrm{Ball}}^{(n)} \bigg(\lfloor a2^{n}\rfloor ,2\Big\lfloor \frac{\beta n}{2}\Big\rfloor \bigg).\label{eq:-107-1}
\end{align}
The limits for {\em odd} radii, denoted by $\overline{\Gamma}_{\mathrm{Ball,\,odd}}^{(\infty)}$
and $\underline{\Gamma}_{\mathrm{Ball,\,odd}}^{(\infty)}$, can be defined
analogously but with $2\lfloor \frac{\beta n}{2}\rfloor $
in~\eqref{eq:-107-1-1} and~\eqref{eq:-107-1} replaced by $2\lfloor \frac{\beta n}{2}\rfloor +1$.
In many information-theoretic problems (e.g., error exponents for channel coding), when the dimension $n$ is
sufficiently large, the uniform distribution on the Hamming ball $\bbB_{r}$
can be   thought of as the $n$-fold product of the Bernoulli
distribution $\Bern(r/n)$. This inspires the first author of this monograph to pose
the following conjecture in~\cite{yu2020edge}.  

\begin{conjecture}[NICD for ball-noise sources] \label{conj:ball_noise}
For $a,\beta\in(0,1/2)$, 
\begin{align}
\underline{\Gamma}_{\mathrm{Ball,\,even}}^{(\infty)}(a,\beta) & =\overline{\Gamma}_{\mathrm{Ball,\,even}}^{(\infty)}(a,\beta)=\overline{\Gamma}^{(\infty)}(a,a),\label{eq:even_radii}
\end{align}
where $\overline{\Gamma}^{(\infty)}$ is the asymptotic forward joint probability
  for the DSBS with correlation coefficient $\rho=1-2\beta$; see its definition in~\eqref{eqn:Gamma_infty}.
\end{conjecture}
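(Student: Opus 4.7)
The plan is to compare the ball-noise and DSBS sources in a common Fourier framework. For any Boolean $f$ on $\{0,1\}^n$, expanding in the Walsh basis yields $\Pr_{\mathrm{Ball},r}[f,f] = \sum_{S\subset[n]} \hat{f}_S^{\,2}\, c_r(|S|)$ and $\Pr_{\mathrm{DSBS},\beta}[f,f] = \sum_{S\subset[n]} \hat{f}_S^{\,2}\, \rho^{|S|}$, where $\rho = 1-2\beta$ and $c_r(m) := |\mathbb{B}_r|^{-1}\sum_{k=0}^{r} K_k(m;n)$, with $K_k(\cdot;n)$ the Krawtchouk polynomial. The first task is pointwise convergence $c_r(m)\to\rho^m$ for each fixed $m$ as $n\to\infty$ with $r=2\lfloor\beta n/2\rfloor$. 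This should follow from the classical asymptotics $K_k(m;n)/\binom{n}{k}=(1-2k/n)^m(1+o(1))$ combined with the geometric decay $\binom{n}{r-j}/\binom{n}{r}\to(\beta/(1-\beta))^j$, which restricts the ball's weight distribution to an $O(1)$-wide window just inside the outer sphere $\mathbb{S}_r$. The even-$r$ hypothesis enters precisely here: the closed form $\sum_{k=0}^{r}\binom{n}{k}(-1)^k=(-1)^r\binom{n-1}{r}$ yields $c_r(n)\to+\rho$ for even $r$ but $c_r(n)\to-\rho$ for odd $r$, a wrong-sign $\Theta(1)$ error (already visible on the parity function) that explains why the conjecture is restricted to even radii.

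For the achievability direction $\underline{\Gamma}_{\mathrm{Ball,\,even}}^{(\infty)}(a,\beta)\geq\overline{\Gamma}^{(\infty)}(a,a)$, I would take a near-optimal sequence of DSBS Boolean functions $f_n$ of $\pi_X$-probability $a$, for instance indicators of Hamming balls (which asymptotically realize $\Lambda_\rho(a,a)$ by Proposition~\ref{prop:CLsphere} and the central limit theorem). For such indicators the Krawtchouk representation of the Fourier spectrum yields Gaussian-type decay $\mathbf{W}_m[f_n]\lesssim e^{-\Omega(m^2/n)}$ for $m\leq\sqrt{n}$, providing low-degree spectral concentration. Truncating at an appropriate $M=M(\epsilon)$, the low-degree error $\sum_{m\leq M}|c_r(m)-\rho^m|\cdot\mathbf{W}_m[f_n]$ vanishes by pointwise convergence, while the high-degree tail is bounded by $2\sum_{m>M}\mathbf{W}_m[f_n]$ and made small by taking $M$ large. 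This delivers $|\Pr_{\mathrm{Ball}}[f_n,f_n]-\Pr_{\mathrm{DSBS}}[f_n,f_n]|=o(1)$, as required.

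The converse $\overline{\Gamma}_{\mathrm{Ball,\,even}}^{(\infty)}(a,\beta)\leq\overline{\Gamma}^{(\infty)}(a,a)$ is the main obstacle. A ball-noise-optimal $f$ could a priori place Fourier mass at degrees $m$ comparable to $n$, where $c_r(m)$ does not approach $\rho^m$ (for $m=n$ they differ by $\Theta(1)$), and the easy Parseval tail estimate is too weak. Two routes look plausible. Route~(a): derive a hypercontractivity-type inequality for the ball-noise operator that forces near-maximizers to have low-degree spectral concentration, perhaps by adapting the nonlinear log-Sobolev machinery of Section~\ref{sec:stronger} to the combinatorial ball-noise kernel. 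Route~(b): write $\Pr_{\mathrm{Ball},r}[f,f]=\sum_{k\leq r}w_k^{\mathrm{Ball}}\,P_k(f)$ with $P_k(f)$ the sphere-noise stability, use the analogous mixture-over-spheres decomposition for DSBS, and reduce both suprema to $\max_f P_{r}(f)$ via a dimension-independent modulus of continuity for the map $k\mapsto\max_f P_k(f)$ on an $O(\sqrt n)$-window around $r$. Making either route quantitative uniformly over all Boolean $f$ is where the proof becomes delicate, and likely requires a new functional inequality specifically tuned to ball- or sphere-noise operators that goes beyond the existing hypercontractivity results for the DSBS.
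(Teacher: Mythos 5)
You are attempting to prove what the paper explicitly presents as an \emph{open conjecture}: the paper provides no proof of Conjecture~\ref{conj:ball_noise}, and the only settled companion result is the odd-radius case~\eqref{eq:odd_radii} established in \cite{yu2020edge}. So there is no ``paper proof'' to compare your argument against, and indeed your own proposal does not close the problem either.

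Your setup is sound: the Krawtchouk-coefficient representation of the ball-noise stability, the pointwise asymptotics $c_r(m)\to\rho^m$ for fixed $m$, and your identification of the parity obstruction at $m=n$ (where $c_r(n)\to\pm\rho$ depending on the parity of $r$) correctly explain why even and odd radii behave differently. This diagnostic matches the combinatorial picture in \cite{yu2020edge}. However, two genuine gaps remain, which you only partially acknowledge. First, even the achievability direction $\underline{\Gamma}_{\mathrm{Ball,\,even}}^{(\infty)}(a,\beta)\geq\overline{\Gamma}^{(\infty)}(a,a)$ is not delivered by transferring Hamming-ball indicators: Proposition~\ref{prop:CLsphere} only gives $\overline{\Gamma}^{(\infty)}(a,a)\geq\Lambda_\rho(a,a)$, and whether Hamming balls are exactly optimal for the DSBS forward joint probability in the CL regime for general $a\in(0,1/2)$ is itself unresolved (it is settled only for $a=1/2$ and $a=1/4$; cf.\ Sections~\ref{sec:clt_half} and~\ref{sec:quarter}). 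To prove achievability you would need to transfer a genuinely near-optimal DSBS sequence $\{f_n\}$, and for such a sequence the Gaussian-type Fourier decay you invoke is not guaranteed. Second, and as you note, the converse direction is wholly unresolved: no uniform comparison $|\Pr_{\mathrm{Ball}}[f,f]-\Pr_{\mathrm{DSBS}}[f,f]|=o(1)$ over all Boolean $f$ is available, and neither of your two routes (a ball-noise hypercontractivity inequality, or a sphere-mixture reduction with a modulus-of-continuity argument) is carried out. Your proposal is a reasonable research program, but it is not a proof, and the statement remains open.
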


Conjecture~\ref{conj:ball_noise} pertains only to even radii. For odd radii,  
\citet{yu2020edge} 
 showed that for $a,\beta\in(0,1/2)$, 
\begin{align}
\underline{\Gamma}_{\mathrm{Ball,\,odd}}^{(\infty)}(a,\beta) & =\overline{\Gamma}_{\mathrm{Ball,\,odd}}^{(\infty)}(a,\beta)=\overline{\Gamma}^{(\infty)}(a,a).\label{eq:odd_radii}
\end{align}
The ball-noise stability problem can be interpreted as an isoperimetric
problem in the $r^{\mathrm{th}}$ power of the Hamming graph \cite{yu2020edge}.
The edge-isoperimetric inequality in Theorem \ref{thm:EdgeIsoper} is a special case of
this isoperimetric problem with $r=1$. 

In addition, similar questions
can be posed when we replace the ball-noise with the sphere-noise. That
is, we keep all things unchanged apart from the fact that $Z^{n}\sim\mathrm{Unif}(\bbB_{r})$
is replaced by $Z^{n}\sim\mathrm{Unif}(\bbS_{r})$, where~$\bbS_{r}$
is the Hamming sphere centered at $0^{n}$ (cf.\ Section~\ref{sec:spheres}).  For this case, the equalities for the odd case in~\eqref{eq:odd_radii}
still holds. However, \citet{yu2020edge} conjectured that the term
 $\overline{\Gamma}^{(\infty)}(a,a)$ in \eqref{eq:even_radii}
should be replaced by $\frac{1}{2}\overline{\Gamma}^{(\infty)}(2a,2a)$.
For more details, please refer to \cite{yu2020edge}.

\begin{acknowledgements}

We sincerely thank the anonymous reviewers for their careful reading and their many insightful comments and suggestions. We would also like to thank the Editor-in-Chief Professor Alexander Barg, and Mr.\ Mike Casey from Now Publishers for their advice in preparing the monograph. We are extremely grateful to our colleagues Zhaoqiang Liu, Anshoo Tandon, Junwen Yang, Qiaosheng Zhang, and  especially Lin Zhou for their help in proofreading parts of the monograph. 

Lei Yu is supported by the National Natural Science Foundation of China (NSFC) grant 62101286 and the Fundamental Research Funds for the Central Universities of China (Nankai University).

Vincent Tan is supported by a Singapore National Research Foundation (NRF) Fellowship (A-0005077-00-00) and Singapore Ministry of Education AcRF Tier 1 grants (A-0009042-00-00, A-8000189-00-00, and A-8000196-00-00).  He would like to thank his wife Huili Guo and his four children Oliver Tan Ying Ren,  Giselle Tan Ying Ci, Hazel Tan Ying Shan, and Ashleigh Tan Ying Xi for their unwavering support and understanding during the writing of this monograph.
\end{acknowledgements}

\backmatter  
\sloppy
\printbibliography

\end{document}